\newtheorem{theorem}{Theorem}[section]
\newtheorem{lemma}[theorem]{Lemma}
\newtheorem{corolla}[theorem]{Corollary}
\numberwithin{equation}{section}
\theoremstyle{definition}
\newtheorem{definition}[theorem]{Definition}
\newtheorem{example}[theorem]{Example}
\theoremstyle{remark}
\newtheorem{remark}[theorem]{Remark}
\newcommand{\re}{\mathop{\mathrm{Re}}}
\newcommand{\im}{\mathop{\mathrm{Im}}}
\begin{document}
\title[Bures Geometry over ${\mathsf C}^*$-algebras ]{Bures Geometry on  ${\mathsf C}^*$-algebraic State Spaces}
\author[P.M.\,Alberti]{Peter M. Alberti}
\address{
Doz. Dr. Peter M. Alberti\\
Topasstra{\ss}e 84, D-04319 Leipzig, Germany
}
\email{peter.alberti@web.de,\,peter.alberti@itp.uni-leipzig.de}
\keywords{${\mathsf C}^*$-algebras, states, fidelity, state geometry, fundamental form, Finsler space}
\subjclass[2010]{46L89, 46L10, 	53C45,  53C60, 58B20}
\date{}
\commby{}

\dedicatory{Dedicated to Prof.\,Dr.\,Armin Uhlmann on occasion of the $\it{90}^{\text{th}}$ birthday}
\begin{abstract}
The inner geometry induced by the Bures distance function on the state space of a unital  ${\mathsf{C}}^*$-algebra, or on parts of it, is considered in detail. As a key result the  local dilation function is calculated at a state when the state argument is bound to vary along certain parameterized curves passing through this state. The parameterized curves considered are those possessing differentiable local implementations as vector states relative to some unital  $^*$-representation in the vicinity of the state in question. The space of all tangent forms at a given state  correspondent to this category of curves is specified as normed linear space with a quadratic norm which is depending from the state in a characteristic manner. In analyzing the structure of the local dilation function special emphasis is laid on describing such parts of the state space, in restriction to which the  line element would be of Finslerian type, and which were geodesically convex subsets. All results obtained are illustrated by examples, mainly referring to normal states over the $vN$-algebra of all bounded linear operators on a separable Hilbert space. Especially, certain subsets of normal states corresponding to density operators of full support are considered and the implications obtained are compared to those known from the investigations of Uhlmann et al in the finite dimensional case.
\end{abstract}
\maketitle
\newpage
\tableofcontents
\newpage
\section{Basic settings}\label{bas}
\subsection{The problem}\label{prob}
In this paper the Bures metric on the state space ${\mathcal S}(M)$ of a unital ${\mathsf C}^*$-algebra $M$ is considered. This is the metric coming along with a certain distance $d_B(M|\nu,\varrho)$ between ${\mathsf C}^*$-algebraic states $\nu,\varrho\in {\mathcal S}(M)$ when considered infinitesimally in the vicinity of a state $\nu\in {\mathcal S}(M)$. In the special case if $M$ is a ${\mathsf W}^*$-algebra and the distance is restricted to the normal state space ${\mathcal S}_0(M)$, $d_B$ agrees with
the distance function of Bures \cite{Bure:69}. But the definition of Bures straightforwardly extends to a  ${\mathsf C}^*$-algebraic situation and therefore since the work of Uhlmann \cite{Uhlm:85} this extension is referred to as Bures distance. To prepare for the definition, consider a state $\mu\in {\mathcal S}(M)$ and a unital $^*$-representation $\{\pi,{\mathcal H}_\pi\}$ of $M$ on a Hilbert space ${\mathcal H}_\pi$ with scalar product $\langle\cdot,\cdot\rangle_\pi$. The $\pi$-fibre of $\mu$ is the set
\begin{equation}\label{faser}
{\mathcal S}_{\pi,M}(\mu)=\bigl\{\chi\in {\mathcal H}_\pi:\mu(x)=
\langle\pi(x)\chi,\chi\rangle_\pi,\,\forall x\in M\bigr\}.
\end{equation}
For a given  unital $^*$-representation say that `the $\pi$-fibre of $\mu$ exists' if ${\mathcal S}_{\pi,M}(\mu)$ is non-void.  Thus, the latter is the set of all those unit vectors of ${\mathcal H}_\pi$ by which the state $\mu$ can be implemented within the representation $\pi$. The Bures distance $d_B(M|\nu,\varrho)$
between two states $\nu,\varrho\in {\mathcal S}(M)$ then is defined as follows:
\begin{definition}\label{budi}
$\ \ \displaystyle d_B(M|\nu,\varrho)=\inf_{\varphi\in
{\mathcal S}_{\pi,M}(\nu),\psi\in {\mathcal S}_{\pi,M}(\varrho)}
\|\varphi-\psi\|_\pi$
\end{definition}
In the definition the infimum is to extend
over all  unital $^*$-representations $\pi$ relative to which both $\pi$-fibres exist and,
within each
such representation, $\varphi$ and $\psi$ then may be varied
through all of ${\mathcal S}_{\pi,M}(\nu)$ and ${\mathcal S}_{\pi,M}(\varrho)$, respectively.

Whereas about the Bures distance a lot of general results exist and examples for infinite dimensional $M$ have been thoroughly analyzed too, see e.g.~in \cite{AlUh:00.1,Pelt:00.1,Pelt:00.1a,AlPe:00.2} and the references cited therein, the Bures metric has been considered almost exclusively for finite dimensional cases, that is mainly on sets of $n\times n$-density matrices where nice characterizations reading in differential geometric terms exist \cite{Uhlm:91.2,Uhlm:92.1,Hueb:92,Uhlm:93,Hueb:93,Ditt:94,Uhlm:95,DitUhl:99}. These find various applications in the geometry of quantum states, see e.g.~in \cite{BenZyc:06} for an account on these concepts, and the more recent work \cite{Uhlm:11}. The aim of the present paper is to bridge this gap and to provide new results about the structure of the Bures metric in a general ${\mathsf C}^*$-algebraic context. Also, the known results in finite dimensions will be reclassified when seen in the light of the more general point of view favoured in this paper, and its significance for density operators over a separable Hilbert space exemplarily will be shown.
Clearly, if seen from an analytic-geometric point of view, to clarify the metric associated to some distance in the case at hand requires the structure of the expression
\begin{equation}\label{intro0}
    \limsup_{t\to 0} \frac{d_B(M|\nu_t,\nu)}{|t|}
\end{equation}
which is considered along sufficiently regular curves $\gamma$ in the state space, to be taken into the focus. Thereby, $$\gamma:I\ni t\longmapsto \nu_t\in {\mathcal S}(M)$$ can be any simple parametric $C^1$-curve passing through $\nu$ at $t=0$ ($I$ be some open interval of reals containing zero). That is, in the vicinity of $\nu$ we have $$\nu_t=\nu+t f +{\mathbf o}(t)$$ with tangent form $f$ at $\nu$, which is the hermitian linear form $f=\nu_t^{\,\prime}|_{t=0}$ over $M$ obtained by taking the derivative at $t=0$. Prior to giving an outline of the main results of this paper, some remarks relating conventions will be added now.
\subsection{Conventions}\label{bas1}
Start with some conventions and notations to which in the sequel we will adhere. In the following, for denoting operator algebras, the upper case latins $M,N,R$ will be used. Let $M$ be a unital ${\mathsf C}^*$-algebra and suppose $\{\pi,{\mathcal H}_\pi\}$ is a unital $^*$-representation of $M$ over the Hilbert space ${\mathcal H}_\pi$. That is, $M\ni x\longmapsto \pi(x)\in {\mathsf B}({\mathcal H}_\pi)$ is a $^*$-homomorphism taking the unit of $M$ into the unit operator over ${\mathcal H}_\pi$.  For Hilbert spaces the symbols ${\mathcal H}, {\mathcal K}$ and their derivates will be preferred. The scalar product ${\mathcal H}_\pi\times {\mathcal H}_\pi
\ni\{\chi,\eta\}\,\longmapsto\,\langle\chi,\eta\rangle_\pi\in
{\mathbb{C}}$ on the representation Hilbert space ${\mathcal H}_\pi$ is supposed
to be linear with respect to the first argument $\chi$, and antilinear
in the second argument $\eta$, and maps into the complex
field ${\mathbb{C}}$. Let ${\mathbb{C}}\ni z\mapsto \bar{z}$ be the complex conjugation,
and be $\Re{ z}$, $\Im{z}$, and $|z|$ the real part, imaginary part, and absolute value of $z$, respectively. The norm of $\chi\in
{\mathcal H}_\pi$ is $$\|\chi\|_\pi=\sqrt{\langle\chi,\chi\rangle_\pi}$$
and the norm closure of some subset ${\mathfrak M}\subset {\mathcal H_\pi}$ will be indicated as  $[{\mathfrak M}]$. For a subset ${\mathfrak N}$ of a linear structure like an algebra or an Hilbert space by ${\mathsf{Lin}}({\mathfrak N})$ the linear hull of the elements of ${\mathfrak N}$ within the respective linear structure will be denoted.
Whereas as symbols for vectors in Hilbert spaces the lower-case greek letters $\varphi,\phi,\psi,\chi,\xi,\eta$ and their derivates will be used, for elements of the ${\mathsf C}^*$-algebra $M$ the lower-case latins $a,b,c,p,q,u,v,w,x,y,z$ and their derivates will occur.  For the ${\mathsf C}^*$-norm of an element
$x\in M$ as well as for the operator norm of a concrete bounded linear
operator $x\in {\mathsf B}({\mathcal H}_\pi)$ the same notation
$\|x\|$ will be used, and the involution ($^*$-operation),
respectively the taking of the hermitian conjugate of an element $x$ are
indicated by the transition $x\,\longmapsto\,x^*$. By
$M_{\mathsf{h}}$ and $M_+$ the hermitian and positive elements of $M$ is meant,
respectively. The null
and the unit element/operator in $M$ and ${\mathsf B}({\mathcal
K})$, with ${\mathcal K}$ being a Hilbert space, will be denoted by ${\mathsf 0}$ and ${\mathsf 1}$. A finite sequence $\{x\}=\left\{x_1,x_2,\ldots,x_n\right\}$, $n\in {{\mathbb{N}}}$, of elements of $M$ is termed finite positive decomposition of the unity of $M$ if $x_k\in M_+$ for all $k\leq n$ and $\sum_k x_k={\mathsf 1}$.
Incidentally, relating the operator and ${\mathsf C}^*$-algebraic context, the reader is referred to the standard
monographs,
e.g.~\cite{Dixm:64,Saka:71,KaRi:83}.

For notational purposes mainly, in short now recall some
fundamental facts relating (bounded) linear forms over ${\mathsf C}^*$-algebras which subsequently
might be of concern in context of Definition \ref{budi}. The lower-case latins $f,g,h$ will be in use as symbols for bounded linear forms. Recall that the
topological dual space $M^*$ of $M$ is the
set of all those linear functionals (linear forms) $f$ which are continuous with respect to the
operator norm topology. Equipped with the dual norm
$\|\cdot\|_1$, which is
given by $\|f\|_1=\sup\{|f(x)|\,:\,x\in M,\,\|x\|\leq 1\}$
and which is referred to as the functional norm, $M^*$ is a Banach space. If $R$ is a ${\mathsf C}^*$-subalgebra of $M$, then the restriction of $f\in M^*$ to $R$ is denoted $f|R$.
For each given $f\in M^*$, the hermitian
conjugate functional $f^*\in {M}^*$ is defined by $$f^*(x)=\overline{f(x^*)}$$ for
each $x\in M$. Remind that $f\in M^*$ is hermitian
if $f=f^*$ holds, and $f$ is termed positive if $f(x)\geq 0$ holds,
for each $x\in M_+$. The cone of all bounded positive linear forms over $M$ is denoted $M_+^*$.  Relating positivity, remind that
a bounded linear form over $M$ is positive
if, and only if, $\|f\|_1=f({\mathsf 1})$. The state space ${\mathcal S}(M)$ over $M$ then is the convex (in the usual (affine) sense) closed subset  of all linear forms of $M^*$ which are positive and normalized to one. Hence, by the previous, ${\mathcal S}(M)=\{f\in M_+^*\,:\,f({{\mathsf{1}}})=1\}$. For states we use the lower-case greek symbols $\mu,\nu,\omega,\varrho$ and their derivates. For each $\omega\in {\mathcal S}(M)$, there exists a cyclic $^*$-representation
$\pi$ of $M$ on some Hilbert space ${\mathcal H}_\pi$, with cyclic vector
$\xi\in {\mathcal H}_\pi$, and obeying $\omega(x)=\langle \pi(x)\xi,
\xi\rangle_\pi$, for all $x\in M$ (Gelfand-Naimark-Segal theorem). Suppose now a finite set of states, $\{\nu,\varrho,\ldots,\mu\} \subset {\mathcal S}(M)$, with cardinality $n\in {{\mathbb{N}}}$.
Then, considering that construction in case
with the state $\omega=(\nu+\varrho+\cdots +\mu)/n$ will provide a unital $^*$-representation $\pi$
such that the $\pi$-fibre of each of the states $\nu,\varrho,\ldots,\mu$ exists (omit the details which are standard). Thus e.g., the Definition
\ref{budi} makes sense for any two states.

Relating the usage of such mere differential geometric notions as `curve', `tangent form', `tangent space', `(differential) metric',  and `inner geometry', `inner metric' and `metric convexity' of subsets, `geodesic' and `geodesic convexity' of subsets in context of ${\mathsf C}^*$-algebraic state spaces, and their derivates like `strata' (which are certain convex in the usual (affine) sense subsets of ${\mathcal S}(M)$, see below), in many respects we have chosen to go back to the roots of these intuitive geometric notions when considered on metric spaces with an additional underlying affine structure, see e.g. in  \cite{Whit:32,Whit:35,Buse:55}, and especially \cite{Rino:61,Rino:61a} which explicitly will be referred to. In fact, when seen under the Bures distance, these structures in general will be infinite dimensional, convex (in the usual (affine) sense) metric structures. Also, since in its operational
aspects these notions at least should allow for the minimal needs of an analysis of e.g.~dynamical systems living on state spaces,  the analogy to some ideas and notions from Riemannian geometry and theory of Finsler spaces is highly manifest \cite{Grom:81,Chav:93,AnMa:93} and will be used freely. Finally remark that in the course of the derivation of the results access will be made to some auxiliary facts from theory of real functions, operator theory and functional analysis. These for convenience are sourced out into seven appendices. Thus, when referring to `\eqref{in}' or `Theorem \ref{enddiffex}' e.g., reference to the correspondently named item or result appearing in Appendix \ref{app_a} or Appendix \ref{app_d}, respectively, will be meant.

\subsection{An outline of the main results}\label{mainres}
Some of the results later on derived in this paper will be outlined now. The starting point will be the observation that the following fundamental estimate has to hold
\begin{subequations}\label{est}
\begin{equation}\label{intro1a}
  \liminf_{t\to 0} \frac{d_B(M|\nu_t,\nu)}{|t|}\geq \| f \|_\nu
\end{equation}
and where  $\|\cdot\|_\nu$ is a quadratic norm over the tangent forms at $\nu$ which reads as
\begin{equation}\label{norm}
 \|f\|_\nu=\sup_{\{x\}}{\frac{1}{2}}\,\sqrt{{\sum_{j}}^\prime
\frac{f(x_j)^2}{\nu(x_j)\phantom{^2}}}
\end{equation}
with $\{x\}=\bigl\{x_1,x_2,\ldots,x_n\bigr\}$, $n\in {{\mathbb{N}}}$, running through the set of all finite positive decompositions of the
unity within $M$ (the $^\prime$ at the summation symbol indicates that
only terms with non-vanishing denominators $\nu(x_j)$ have to be summed up). From \eqref{intro1a} obviously the following estimate of the expression \eqref{intro0} can be inferred
\begin{equation}\label{intro1}
  \limsup_{t\to 0} \frac{d_B(M|\nu_t,\nu)}{|t|}\geq \| f \|_\nu
\end{equation}
\end{subequations}
Thereby, on analyzing the expression \eqref{intro0} on special non-commutative ${\mathsf C}^*$-algebras (e.g. full matrix algebras have been considered) and/or with special examples of curves $\gamma$, which are accessible to concrete calculation, we found it quite hard to construct examples which were deviating from the special case
\begin{subequations}\label{intro2}
\begin{equation}\label{intro2a}
  \lim_{t\to 0} \frac{d_B(M|\nu_t,\nu)}{|t|}= \| f  \|_\nu
\end{equation}
of \eqref{intro1}. But such examples are shown to exist and therefore equality occuring in \eqref{intro1} and existence of the strict limit in \eqref{intro0} cannot be the general case. Thus the question arises how the deviation of \eqref{intro1} from the Finslerian law \eqref{intro2a} in the general case could be quantified. As will be shown in this paper, at least as far as only $C^1$-implementable curves for $\gamma$ are admitted for use in \eqref{intro0}, the structure of \eqref{intro0} and the quantitative deviation of \eqref{intro1} from \eqref{intro2a} can be completely clarified. Also remark that in case it is satisfied, owing to $f=\nu_t^{\,\prime}|_{t=0}$ the interpretation of formula  \eqref{intro2a} usually reads in terms of the (infinitesimal) Bures length element ${\mathsf{d}}s$ at $\nu$ along the curve $\gamma$ in quest, in we then equivalently have
\begin{equation}\label{intro2b}
 {\mathsf{d}}s= \| {\mathsf{d}}\nu  \|_\nu
\end{equation}
\end{subequations}
When fulfilled in sufficient generality on suitably chosen subsets of states, this law will yield there a direct analog to the first fundamental form of differential geometry.

To explain these results, agree that a simple parametric $C^1$-curve $\gamma$ is called `locally implementable (around $\nu$)' if it can be implemented by a simple parametric $C^1$-curve $I_\pi \ni t\longmapsto \varphi_t\in {\mathcal S}_{\pi,M}(\nu_t)\subset {\mathcal H}_\pi$, where $I_\pi\subset I$ is an open interval containing $0$ and  $\{\pi,{\mathcal H}_\pi\}$ is a unital $^*$-representation with respect to which such an  implementation $(\varphi_t)$ exists. To such a  $^*$-representation  $\{\pi,{\mathcal H}_\pi\}$ will be  referred to as being `$\gamma$-compliant around $\nu$'. Thus a simple $C^1$-curve $\gamma$ passing through a state $\nu$ belongs to the class mentioned if it admits a $^*$-representation  $\{\pi,{\mathcal H}_\pi\}$ which is $\gamma$-compliant around $\nu$.

Now, suppose $\{\pi,{\mathcal H}_\pi\}$ is $\gamma$-compliant around $\nu$, with differentiable local implementation $(\varphi_t)$ around $\nu$.
Then, with $\varphi=\varphi_0\in {\mathcal S}_{\pi,M}(\nu)$, one has $\varphi_t=\varphi+
t\phi+{\mathbf o}(t)$, with the vector $\phi=(\varphi_t)^{\,\prime}|_{t=0}\in {\mathcal H}_\pi$ which will be called `tangent vector of the local implementation of $\gamma$ at $\nu$'. It depends on the $\gamma$-compliant representation $\{\pi,{\mathcal H}_\pi\}$ and the implementation $(\varphi_t)$ chosen and which both might be highly ambiguous data for one and the same curve $\gamma$ and state $\nu$. All these vectors $\phi=\phi(\pi,{\mathcal H}_\pi,(\varphi_t))$ will be collected into the class ${\mathsf{T}}_\nu(M|\gamma)$ of tangent vectors of $\gamma$ at $\nu$. Clearly, the tangent form $f$ of $\gamma$ at $\nu$ is uniquely determined by each $\phi\in{\mathsf{T}}_\nu(M|\gamma)$, for if $\phi=\phi(\pi,{\mathcal H}_\pi,(\varphi_t))$, then $f$ can be reconstructed via $\pi$ from $\phi$ and $\varphi$. Thereby, the set ${\mathsf{T}}_\nu(M)$ of tangent forms that can occur at $\nu$ in context of the above category of curves can be given a characterization reading in terms of the expression \eqref{norm}
\begin{equation}\label{ts}
  {\mathsf{T}}_\nu(M)=\bigl\{f\in M^*_{\mathsf h}:\,f({\mathsf 1})=0,\,\|f \|_\nu<\infty\bigr\}
\end{equation}
with $M^*_{\mathsf h}$ being the set of all bounded  hermitian linear forms over $M$. For reasons of simplicity here the ${\mathsf 0}$-form is included into the set of all tangent forms at $\nu$.

As will be shown later, the assertion of \eqref{ts} is that a given bounded hermitian linear form $g\not={\mathsf 0}$ on $M$ with $g({\mathsf 1})=0$ can occur as a tangent form at $\nu$ if, and only if, there is a unital $^*$-representation   $\{\pi,{\mathcal H}_\pi\}$ with non-trivial $\pi$-fibre ${\mathcal S}_{\pi,M}(\nu)$ such that, to given $\varphi\in {\mathcal S}_{\pi,M}(\nu)$ there exists $\psi\in {\mathcal H}_\pi$ such that, for any $x\in M$,
\begin{subequations}\label{optprob}
\begin{equation}\label{optprob0}
 g(x)=\langle\pi(x)\psi,\varphi\rangle_\pi+\langle\pi(x)\varphi,\psi\rangle_\pi
\end{equation}
is fulfilled. In line of the proof an alternative to  \eqref{norm} method for evaluating $\|g\|_\nu$ is established and which reads in terms of the data coming along with \eqref{optprob0}.

To explain this, consider ${\mathcal H}_\pi$ as Euclidean space with inner product $\langle\xi,\eta\rangle_{\pi,\mathbb R}=\Re \langle\xi,\eta\rangle_\pi$, and be $[\pi(M)_{\mathsf h}\varphi]$ the closed real linear subspace generated by the vectors $\pi(x)\varphi$, with $x=x^*$, $x\in M$. Let $\hat{\psi}_0\in [\pi(M)_{\mathsf h}\varphi]$ be the (unique) vector obeying
\begin{equation}\label{optprob1}
\|\psi-\hat{\psi}_0\|_\pi=\inf_{\xi\in[\pi(M)_{\mathsf h}\varphi] } \|\psi-\xi\|_\pi
\end{equation}
that is, $\hat{\psi}_0$ is the best approximation of $\psi$ within $ [\pi(M)_{\mathsf h}\varphi]$ in sense of the real Hilbert space structures considered. Incidentally, remark that this is equivalent to the claim that $\xi=\hat{\psi}_0$ be (the unique) solution of the problem
\begin{equation}\label{optprob2}
\xi\in [\pi(M)_{\mathsf h}\varphi]:\  g(x)=\langle\pi(x)\xi,\varphi\rangle_\pi+\langle\pi(x)\varphi,\xi\rangle_\pi,\,\forall x\in M
\end{equation}
\end{subequations}
By means of $\hat{\psi}_0$ the announced evaluation method for $\|g\|_\nu$ then reads
\begin{equation}\label{intro3c}
    \|g\|_\nu=\|\hat{\psi}_0\|_\pi
\end{equation}

Relating the latter  method, it is important to note that if  $\{\pi,{\mathcal H}_\pi\}$ exists  such that \eqref{optprob} and \eqref{intro3c} hold, then such a representation schema of $g$ will exist in respect of each other $^*$-representation $\{\tilde{\pi},{\mathcal H}_{\tilde{\pi}}\}$ too,  provided the $\tilde{\pi}$-fibre of $\nu$ is  non-trivial. That is, depending from the additional information about $\nu$ we have,  the choice of the $^*$-representation which is used in \eqref{intro3c} can be adapted to the needs of an easy determination of the (uniquely determined) solution of problem \eqref{optprob2} as well as to the needs of a potentially existing, comfortable mode of   calculating $\|g\|_\nu$ by formula \eqref{intro3c}.
An often occurring special case of the previous arises if $M$ is a $vN$-algebra acting on a Hilbert space ${\mathcal H}$, and $\nu$ is a vector state on $M$. Thus, $\nu(x)=\langle x\varphi,\varphi\rangle$ holds, for all $x\in M$, with $\varphi\in {\mathcal H}$. Things even simplify: $g\in {\mathsf{T}}_\nu(M)$ then happens if, and only if,
\begin{subequations}\label{optprobvN0}
\begin{equation}\label{optprobvN}
g(x)=\langle x \xi,\varphi\rangle + \langle x\varphi,\xi\rangle,\ \forall x\in M,
\end{equation}
is fulfilled, for some vector $\xi\in [M_{\mathsf h}\varphi]$. Thereby, $\xi$ with this property is uniquely determined, and as a special case of \eqref{intro3c} we infer that
\begin{equation}\label{optprobvN1}
\|g\|_\nu=\|\xi\|
\end{equation}
\end{subequations}

With these preparations in mind, now we are going to state the main results which will be derived and proved subsequently in this paper.
Relating \eqref{intro0}, as a key result the following Pythagorean law will be proved to hold:
\begin{subequations}\label{main}
\begin{equation}\label{intro3}
    \biggl(\limsup_{t\to 0} \frac{d_B(M|\nu_t,\nu)}{|t|}\biggr)^2=\| f \|_\nu^2+\|p_\pi(\varphi)^\perp p_\pi^{\,\prime}(\varphi)^\perp\phi\|_\pi^2
\end{equation}
Note that the formula holds with respect to each $\gamma$-compliant $^*$-representation $\{\pi,{\mathcal H}_\pi\}$ and corresponding local implementation $t\mapsto \varphi_t$ of $\gamma$ around $\nu$, i.e.~with tangent vector $\phi\in{\mathsf{T}}_\nu(M|\gamma)$  and tangent form $f\in {\mathsf{T}}_\nu(M)$ of $\gamma$ at $\nu$ with $f(\cdot)=\langle\pi(\cdot)\phi,\varphi\rangle_\pi+\langle\pi(\cdot)\varphi,\phi\rangle_\pi$,
where $\varphi=\varphi_0$. If $\pi(M)^{\,\prime}$ is the commutant $vN$-algebra of the concret ${\mathsf C}^*$-algebra $\pi(M)$ acting on ${\mathcal H}_\pi$, then $p_\pi^{\,\prime}(\varphi)$ and $p_\pi(\varphi)$  are the orthoprojections within ${\mathsf B}({\mathcal H}_\pi)$ projecting onto the norm closures $[\pi(M)\varphi]$ and $[\pi(M)^{\,\prime}\varphi]$  of the linear subspaces $\pi(M)\varphi$ and $\pi(M)^{\,\prime}\varphi$,  respectively, in the Hilbert space ${\mathcal H}_\pi$. It is known that $p_\pi^{\,\prime}(\varphi)\in \pi(M)^{\,\prime}$,  $p_\pi(\varphi)\in \pi(M)^{\,\prime\prime}$ (double commutant $vN$-algebra of $\pi(M)$ over ${\mathcal H}_\pi$).

The proof of \eqref{intro3}, which subsequently will be presented, at the same footing provides an equivalent characterization reading in terms of the tangent vectors arising from local implementations of $\gamma$ around $\nu$, in there we have
\begin{equation}\label{intro3a}
   \limsup_{t\to 0} \frac{d_B(M|\nu_t,\nu)}{|t|}=\inf_{\phi\in{\mathsf{T}}_\nu(M|\gamma)}\|\phi\|
\end{equation}
\end{subequations}
Thereby, one has to keep in mind that the  right hand side expression of this formula has to be seen as a convenient shortcut notation of the expression
\[
\inf_{\phi(\pi,{\mathcal H}_\pi,(\varphi_t))\in{\mathsf{T}}_\nu(M|\gamma)}\|\phi(\pi,{\mathcal H}_\pi,(\varphi_t))\|_\pi
\]
where $\pi$ is thought to extend over all $\gamma$-compliant representation $\{\pi,{\mathcal H}_\pi\}$ around $\nu$ and which the tangent vector $\phi$ is referring to while it is running through ${\mathsf{T}}_\nu(M|\gamma)$.

On the other hand, formula \eqref{intro3c}  obviously can be applied with $f$ and $\hat{\phi}_0$ instead of $g$ and $\hat{\psi}_0$ with respect to an arbitrarily given, fixed  $\gamma$-compliant representation $\{\pi,{\mathcal H}_\pi\}$, and then both the formulae of \eqref{main} can be  summarized into one relation
\begin{equation}\label{intro3e}
    \inf_{\phi\in{\mathsf{T}}_\nu(M|\gamma)}\|\phi\|^2=\|\hat{\phi}_0\|_\pi^2+\| p_\pi(\varphi)^\perp p_\pi^{\,\prime}(\varphi)^\perp\phi\|_\pi^2
\end{equation}
and where $\hat{\phi}_0$ is the best Euclidean approximation of the tangent vector $\phi$ in $[\pi(M)_{\mathsf h}\varphi]$.
By this relation the Bures geometry on ${\mathsf C}^*$-algebraic state spaces is governed, and the global infimum on the left hand side can be calculated locally within each of the $\gamma$-compliant $^*$-representations and the associated local implementations of $\gamma$ around $\nu$.

Remark that the term $\| p_\pi(\varphi)^\perp p_\pi^{\,\prime}(\varphi)^\perp\phi\|_\pi^2$ occurring on the right hand side of \eqref{intro3} and \eqref{intro3e} is an invariant which is independent from the tangent form $f$ and measures how strongly the Bures geometry on ${\mathcal S}(M)$ within the equivalence class of curves with tangent form $f$ locally at $\nu$ might be deviating from the Finslerian law \eqref{intro2}. Note that as an important consequence of \eqref{est} from \eqref{intro3} the validity of the equivalence
\begin{equation}\label{intro4}
  \lim_{t\to 0} \frac{d_B(M|\nu_t,\nu)}{|t|}= \| f \|_\nu\ \ \Longleftrightarrow\ \ p_\pi(\varphi)^\perp p_\pi^{\,\prime}(\varphi)^\perp\phi=0
\end{equation}
can be followed. Thereby, by the very form of \eqref{intro3}, one infers that $p_\pi(\varphi)^\perp p_\pi^{\,\prime}(\varphi)^\perp\phi$ is vanishing for a $\phi$ if, and only if, it is vanishing for any  $\phi\in{\mathsf{T}}_\nu(M|\gamma)$. Thus, in line with this, we have a necessary and sufficient condition for  the Finslerian law \eqref{intro2} to hold.

As an example we are going to indicate how the validity of \eqref{intro2} for $^*$-automorphic generated $\gamma$ and passing through $\nu$ readily can be seen. In fact, in this case one has $\gamma=(\nu_t)$, with $\nu_t=\nu\circ\alpha_t$ and inner $^*$-automorphism group $(\alpha_t)$. By standard facts $\gamma$ then is implementable with respect to some unital $^*$-representation $\{\pi,{\mathcal H}_\pi\}$ as $\varphi_t=u_t\varphi$, for $\varphi\in {\mathcal S}_{\pi,M}(\nu)$ and with some norm continuous group of unitary operators $u_t$ and belonging to the envelopping $vN$-algebra $\pi(M)^{\,\prime\prime}$. Thus, by the von-Neumann density theorem, $\varphi_t\in [\pi(M) \varphi]$ holds, for all parameters $t$. Especially from the latter then also $\phi=\varphi_t^{\,\prime}\in [\pi(M) \varphi]$ will hold, and therefore $p_\pi^{\,\prime}(\varphi)^\perp\phi=0$ must be fulfilled. In view of \eqref{intro4} then \eqref{intro2} follows.

Another line of investigation of this paper will be to identify non-trivial subsets $\Omega\subset {\mathcal S}(M)$ such that the Finslerian law \eqref{intro2} were fulfilled at each $\nu\in \Omega$ for any implementable curve $\gamma$ specified as above and passing through $\nu$ and evolving completely within $\Omega$. As has been previously mentioned, these then are examples where formula \eqref{intro2b} can be considered to  play a r{\^o}le analogous to the first fundamental form.  To construct a class of examples for $\Omega$, to given state $\mu$ let ${\mathfrak F}(\mu)$ be the set of all uniformly bounded nets $\bigl(x_\alpha\bigr)$ of positive semi-definite elements of $M$ such that $\lim_\alpha \mu(x_\alpha)=0$. For fixed $\mu\in {\mathcal S}(M)$, let
$\Omega(\mu)=\{\omega\in {\mathcal S}(M): {\mathfrak F}(\omega)={\mathfrak F}(\mu)\}$.
The set $\Omega(\mu)$ proves to be an affinely convex subset of ${\mathcal S}(M)$ and subsequently will be referred to as $\mu$-stratum.
As subsequently will be proved, in addition each stratum $\Omega$ possesses the property that, for $\nu\in \Omega$ and implementable curve $\gamma$ passing through $\nu$ and evolving in $\Omega$ exclusively, by a tangent vector $\phi=\phi(\pi,{\mathcal H}_\pi,(\varphi_t))$ necessarily $p_\pi(\varphi)\phi=\phi$ will be fulfilled, with $\varphi=\varphi_0$. Thus, in restricting the Bures metric on a stratum yields that the Finslerian law \eqref{intro2} will be   fulfilled there.  Obviously, the state space decomposes in a unique way as a foliation into leaves
\[\textstyle {\mathcal S}(M)=\bigcup_{i\in \Lambda} \Omega(\mu_i) \]
which are made of mutually disjoint strata $\Omega(\mu_i)$, with an indexed  family $(\mu_i)_{i\in \Lambda}$.
Thus, under the Bures metric, a  stratum resembles some generalized Finsler space.

Another problem arising in this context is to identify the lines of shortest Bures length and which were  connecting two given states $\nu$ and $\varrho$ (either in respect of the whole state space or in restriction to some subset, like a stratum e.g.), and to ask to which extent shortest lines can be uniquely determined. In respect of the whole state space, this problem can be resolved in surprising generality. As it comes out, in infinite dimensional,  non-commutative cases antipodal states normally exist (i.e. more than one shortest line connecting the states in question exist).  If considered in restriction to subsets of states, the criteria obtained are not quite easy to handle. Then, the aim will be to identify those parts of the state space which prove to be geodesically convex. By this notion it is meant that for any two states of such a set there exists (up to the parameterization)  exactly one line of shortest Bures length which is connecting the states considered and thereby is completely evolving in the set in question. For instance, in infinite dimensional non-commutative cases of $M$ and depending from the state $\mu$ the stratum $\Omega(\mu)$ generated by $\mu$ may or may not possess the property of the geodesic convexity. In line with this, it will be a good strategy to look for subsets of strata which are geodesically convex. Clearly, these distinguished subsets of states then are examples of structures where the Finslerian law holds as well as  existence and uniqueness for shortest geodesic lines connecting any two of its points can be granted and thus are structures closely resembling e.g. to  Riemannian manifolds.

As an illustration how the facts previously announced might go   together, the category of full algebras of bounded linear operators on separable Hilbert spaces will be chosen.
To start with, suppose ${\mathcal H}$ to be a separable Hilbert space, and let $\mu$ be a faithful normal state on the ${\mathsf W}^*$-algebra $M={\mathsf B}({\mathcal H})$. Then, the $\mu$-stratum $\Omega(\mu)$ comprises the set of all faithful normal states ${\mathcal S}^{\mathsf{faithful}}_0({\mathsf B}({\mathcal H}))$. As a consequence of the  fact that the  Finslerian law \eqref{intro2} holds on a stratum, on the set of all faithful normal states 
the Bures metric then can be based on the Bures line element ${\mathsf{d}}s$ at $\nu$, and which is  given as
\[
{\mathsf{d}}s=\| {\mathsf{d}}\nu \|_{\nu}
\]
for ${\mathsf{d}}\nu\in {\mathsf{T}}_\nu({\mathsf B}({\mathcal H}))$.
Now, since each faithful normal state $\nu$ via the trace formula
\begin{equation}\label{bi1}
    \nu(x)={{\mathsf{tr}}}\,\varrho x,\,\forall x\in {\mathsf B}({\mathcal H})
\end{equation}
in an affine and one-to-one manner corresponds to a density operator $\varrho$ of full support, the problem can be translated in terms of the affinely convex set of all density operators of full support and of (implementable) $C^1$-curves
$\gamma: I\ni t\longmapsto \varrho_t$ evolving completely in this set and passing through a given density operator $\varrho=\varrho_0$ there. As mentioned above, the Finslerian law \eqref{intro2} is  satisfied and in the context at hand then reads
\begin{subequations}\label{bi2a}
\begin{equation}\label{bi2}
     \lim_{t\to 0} \frac{d_B(\varrho_t,\varrho)}{|t|}= \| \varrho^{\,\prime} \|_\varrho
\end{equation}
with the derivative $\varrho^{\,\prime}=\varrho_t^{\,\prime}|_{t=0}$, and where the expression on the right hand side arises from \eqref{norm} through substituting $f(x_j)$ and $\nu(x_j)$ with ${{\mathsf{tr}}}\,\varrho^{\,\prime} x_j$ and ${{\mathsf{tr}}}\,\varrho x_j$, at each $j$. Equivalent to and instead of \eqref{bi2} the infinitesimal version reading as
\begin{equation}\label{bi2b}
    {\mathsf{d}}s= \| {\mathsf{d}}\varrho \|_\varrho
\end{equation}
\end{subequations}
and representing the Bures length element ${\mathsf{d}}s$ at $\varrho$ in the set of density operators of full support and reading in terms of the differential tangent form ${\mathsf{d}}\varrho$  will be in use. The exercise to be processed now is  chosing a unital $^*$-representation in respect of which the fibre of the state $\nu$ is non-trivial and such that the method of calculating the expression $\| {\mathsf{d}}\varrho \|_\varrho$  based on \eqref{optprob}
and formula \eqref{intro3c} gets  applicable. To this sake, consider now the separable Hilbert space ${\mathsf{ H.S.}}({\mathcal H})$ of all Hilbert-Schmidt operators over ${\mathcal H}$, with scalar product $\langle \xi,\eta\rangle_{{\mathsf{ H.S.}}}={{\mathsf{tr}}}\, \xi\eta^* $, for each $\xi,\eta\in {\mathsf{ H.S.}}({\mathcal H})$. Then, the operators of ${\mathsf B}({\mathcal H})$ isomorphically can be $^*$-represented as acting by left multiplication on Hilbert-Schmidt operators. In this context then each normal state \eqref{bi1} appears as a vector state
$$ \nu(x)={{\mathsf{tr}}}\,x\sqrt{\varrho}\sqrt{\varrho}^{\,*} ,\,\forall x\in {\mathsf B}({\mathcal H})
$$
This $^*$-representation meets all the requirements and will be used. Since $\varrho^{\,\prime}$ corresponds to a tangent form at $\nu$, by \eqref{optprobvN} unique ${\mathsf{d}}\xi\in \bigl[{\mathsf B}({\mathcal H})_{\mathsf h}\sqrt{\varrho}\,\bigr]_{{\mathsf{H.S.}}}$ exists with
\begin{equation}\label{bi3a}
{{\mathsf{tr}}}\,{\mathsf{d}}\varrho\, x={{\mathsf{tr}}}\,x\,{\mathsf{d}}\xi \sqrt{\varrho}+{{\mathsf{tr}}}\,x\sqrt{\varrho}\,{\mathsf{d}}\xi^*,\,\forall x\in {\mathsf B}({\mathcal H})
\end{equation}
This equation is equivalent to the (uniquely solvable) operator identity
\begin{subequations}\label{bi}
\begin{equation}\label{bi3b}
{\mathsf{d}}\varrho ={\mathsf{d}}\xi \sqrt{\varrho}+\sqrt{\varrho}\,{\mathsf{d}}\xi^*
\end{equation}
In view of \eqref{optprobvN1}, \eqref{bi2b}, by means of ${\mathsf{d}}\xi$ the square of the length element ${\mathsf{d}}s$ reads
\begin{equation}\label{bi3c}
 {\mathsf{d}}s^2=\| {\mathsf{d}}\varrho \|_\varrho^2=\|{\mathsf{d}}\xi\|_{{\mathsf{ H.S.}}}^2={{\mathsf{tr}}}\,{\mathsf{d}}\xi {\mathsf{d}}\xi^*
\end{equation}
\end{subequations}
and which relation holds on the set of all density operators of full support ${\mathcal S}^{\mathsf{faithful}}_0({\mathsf B}({\mathcal H}))$.
The differential solution ${\mathsf{d}}\xi$ of \eqref{bi3b} will be given a more  explicit form in case of some exposed category of submanifolds of density operators of full support.\\
To this sake, for $\mu\in {\mathcal S}^{\mathsf{faithful}}_0({\mathsf B}({\mathcal H}))$, consider  the affinely convex subset given as
\begin{equation}\label{introleaf1a}
{\mathcal S}_0(\mu)=\bigl\{\varrho\in {\mathcal S}^{\mathsf{faithful}}_0({\mathsf B}({\mathcal H})):\ \varrho\mu=\mu \varrho\bigr\}
\end{equation}
where in the defining relation the operator multiplication is made use of. The set given by \eqref{introleaf1a} will be referred to as $\mu$-leaf (within the set of   density operators of full support).
Remarkably, each $\mu$-leaf proves to be geodesically convex in respect of  the Bures metric. Note that if a differentiable curve $I\ni t\longmapsto \varrho_t$ evolving within a $\mu$-leaf is considered, then from the definition \eqref{introleaf1a} the  relation $\varrho^{\,\prime}\mu=\mu\, \varrho^{\,\prime}$ follows. Hence, the unique solution ${\mathsf{d}}\xi\in \bigl[{\mathsf B}({\mathcal H})_{\mathsf h}\sqrt{\varrho}\,\bigr]_{{\mathsf{H.S.}}}$ of \eqref{bi3b} is obtained under the auxiliary condition that
\begin{equation}\label{introleaf1b}
\varrho\in {\mathcal S}_0(\mu),\,{\mathsf{d}}\varrho\, \mu=\mu\, {\mathsf{d}}\varrho
\end{equation}
hold. But under these conditions one can prove that a unique self-adjoint (possibly unbounded) operator valued solution ${\mathsf{d}}x$ of the operator equation
\begin{subequations}\label{bi45}
\begin{equation}\label{bi4}
{\mathsf{d}}\varrho ={\mathsf{d}}x \,\varrho+\varrho\,{\mathsf{d}}x
\end{equation}
exists, with domain of definition ${\mathcal D}({\mathsf{d}}x)$ assuring that ${\mathsf{d}}x \sqrt{\varrho}\in {{\mathsf{H.S.}}}({\mathcal H})$ is fulfilled. Also, in this case then ${\mathsf{d}}\xi={\mathsf{d}}x \sqrt{\varrho}$ will be the unique solution of \eqref{bi3b} in $\bigl[{\mathsf B}({\mathcal H})_{\mathsf h}\sqrt{\varrho}\,\bigr]_{{\mathsf{H.S.}}}$. Accordingly, in restriction to the $\mu$-leaf, by means of the self-adjoint solution ${\mathsf{d}}x$ of \eqref{bi4} the square of the Bures length element \eqref{bi3c} reads
\begin{equation}\label{bi5}
 {\mathsf{d}}s^2={\mathsf{tr}}\,\bigl({\mathsf{d}} x\,\sqrt{\varrho}\bigr)\bigl({\mathsf{d}} x\,\sqrt{\varrho}\bigr)^*=\| {\mathsf{d}} x\,\sqrt{\varrho}\,\|_{{\mathsf{H.S.}}}^2
\end{equation}
\end{subequations}
There are special cases of $\mu$-leaves where the ${\mathsf{d}}\varrho$-coordinated $1$-form ${\mathsf{d}} x$ is bounded operator valued. For instance, in case of ${\mathsf{dim}}\,{\mathcal H}<\infty$, if for $\mu$  the equipartition  $\mu=(1/{\mathsf{dim}}\,{\mathcal H})\,{\mathsf 1}$ is chosen, then this will be the case and we have
\[
{\mathcal S}_0(\mu)={\mathcal S}^{\mathsf{faithful}}_0({\mathsf B}({\mathcal H}))
\]
Thus, in this case, ${\mathcal S}^{\mathsf{faithful}}_0({\mathsf B}({\mathcal H}))$ is geodesically convex, and \eqref{bi5} then reads
\[
{\mathsf{d}}s^2={\mathsf{tr}}\,\varrho\,{\mathsf{d}} x^2
\]
 This special finite dimensional branch of formula \eqref{bi45} corresponds to a result observed by Uhlmann \cite{Uhlm:91.2,Uhlm:92.1,Uhlm:93,Uhlm:95,Uhlm:11} and saying that the connection generated by the Bures metric on the manifold of all invertible $n\times n$-density matrices is Riemannian, see \cite{DiRu:92.1,DiRu:92}.
 In this case by standard results, see \cite{Hein:51,Rose:56,LuRo:59},  the ${\mathsf{d}}\varrho$-coordinated $1$-form ${\mathsf{d}} x$ can be written as
 \[
 {\mathsf{d}} x=\int_0^\infty \exp{(-\varrho\,t)}\,{\mathsf{d}}\varrho \exp{(-\varrho\,t)}\ d\/t
 \]
By \eqref{bi45} the functional analytical core of the subject is extended generically from the category of invertible density operators over finite dimensional Hilbert spaces to the category of $\mu$-leaves over separable Hilbert spaces.
\newpage
\section{Bures distance, transition probability and fidelity}\label{burdist}
The aim of this section is to provide some of the  essentials around the distance as coming along with Definition \ref{budi}. Thereby,
note that most of the properties of the Bures distance $d_{\mathrm B}$ follow along with known properties of the functors $P$ or $F$ of Uhlmann's ${\mathsf C}^*$-algebraic transition probability $P$ \cite{Uhlm:76} or the so-called fidelity $F$ \cite{Jozs:94}, respectively. Thereby, between $P$ and $F$ the functional relation $P=F^{\,2}$ is fulfilled, with the value $F(M|\nu,\varrho)$ of the functor $F$ to be defined at two  states $\nu,\varrho\in {\mathcal S}(M)$ of the given unital ${\mathsf C}^*$-algebra $M$ as follows\,:
\begin{definition}\label{genprob.2}\hfill{}
$\displaystyle F(M|\nu,\varrho)=
\sup_{\varphi\in
{\mathcal S}_{\pi,M}(\nu),\psi\in {\mathcal S}_{\pi,M}(\varrho),\, \pi}
|\langle \psi,\varphi\rangle_\pi|\hfill{}\,\ \ \ \ .$
\end{definition}
The range of variables in the supremum is the same as in Definition \ref{budi}.
According to Definition \ref{budi}, in terms of $F$ then the following formula for $d_B$ is obtained\,:
\begin{equation}\label{pcont.1}
d_B(M|\nu,\varrho)=\sqrt{2}\,\sqrt{1-F(M|\nu,\varrho)}
\end{equation}
More generally, for notational simplicity mainly,  for other formulae of Bures geometry too subsequently a (possibly mixed in $F$ and $P$) representation will be preferred in which each free single term of the square root of $P$ at each place of occurrence will be replaced with $F$ by applying the substitution
\begin{equation}\label{pcont.1aa}
F(M|\nu,\varrho)=\sqrt{P(M|\nu,\varrho)}
\end{equation}
The fidelity now plays a fundamental r\^{o}le in context of quantum information processing, see e.g.~\cite{Jozs:94}, and for the terminology and usage in a quantum physical context refer the reader to some of the discussions in \cite{Uhlm:11}.
\subsection{Some basic results}\label{bas2}
Let $\nu,\varrho\in {\mathcal S}(M)$, and be $\{\pi,{\mathcal H}_\pi\}$ a unital $^*$-representation of $M$
such that the $\pi$-fibres of $\nu, \varrho$ both exist. Suppose $\varphi\in {\mathcal S}_{\pi,M}(\nu)$ and
$\psi\in {\mathcal S}_{\pi,M}(\varrho)$. Let
$$\pi(M)^{\,\prime}=\{z\in {\mathsf B}({\mathcal H}_\pi): z\pi(x)=\pi(x)z,\,\forall x\in M\}$$
be the commutant $vN$-algebra of $\pi(M)$, and be ${\mathcal U}(\pi(M)^{\,\prime})$ the group
of unitary operators of $\pi(M)^{\,\prime}$.
Define a linear form $h_{\psi,\varphi}^\pi$ as follows\,:
\begin{equation}\label{hform}
\forall \,z\in\pi(M)^{\,\prime}\,:\  h_{\psi,\varphi}^\pi(z)=\langle z\psi,\varphi\rangle_\pi\,.
\end{equation}
For $\chi\in {\mathcal H}_\pi$ let orthoprojections
$p_\pi(\chi)$ and $p_\pi^{\,\prime}(\chi)$ be defined as the orthoprojections projecting from ${\mathcal H}_\pi$
onto the the closures
$[\pi(M)^{\,\prime}\chi]$ and
$[\pi(M)\chi]$ of the linear subspaces $\pi(M)^{\,\prime}\chi=\{z\chi: z\in\pi(M)^{\,\prime} \}$ and
$\pi(M)\chi=\{y\chi: y\in\pi(M) \}$, respectively.
Note that $p_\pi(\chi)\in \pi(M)^{\,\prime\prime},
p_\pi^{\,\prime}(\chi)\in \pi(M)^{\,\prime}$ with the double commutant
$vN$-algebra $$\pi(M)^{\,\prime\prime}=
\bigl(\pi(M)^{\,\prime}\bigr)^{\,\prime}$$ of $\pi(M)$. By the Kaplansky--von\,Neumann theorem
$\pi(M)$ is strongly dense within $\pi(M)^{\,\prime\prime}$, and therefore
one always has
\begin{equation}\label{bas3a}
[\pi(M)\chi]=[\pi(M)^{\,\prime\prime}\chi]\,,
\end{equation}
which is useful to know. Relating \eqref{faser}, for each
$\chi\in {\mathcal S}_{\pi,M}(\mu)$ the following is true\,:
\begin{equation}\label{bas4}
{\mathcal S}_{\pi,M}(\mu)=\bigl\{v\chi:\,v^*v=p_\pi^{\,\prime}(\chi),\,v\in \pi(M)^{\,\prime}\bigr\}\,.
\end{equation}
We are going to comment now on some of the fundamentals of Bures geometry.
\subsubsection{Basic algebraic facts on fidelity}\label{basfa}
In view of \eqref{pcont.1aa} and \cite[{\sc{Corollary 1}},\,{\sc{Corol\-lary 2}},\,eqs.\,(5),\,(6) and {\sc{Theorem 3}}]{Albe:83} on a  unital ${\mathsf C}^*$-algebra $M$ the fidelity $F$ may be represented as follows.
\begin{lemma}\label{bas3} Whenever $\nu,\varrho\in  {\mathcal S}(M)$, and $\varphi\in {\mathcal S}_{\pi,M}(\nu)$ and
$\psi\in {\mathcal S}_{\pi,M}(\varrho)$, then
\begin{enumerate}
  \item\label{bas3aa} ${F(M|\nu,\varrho)}=\| h_{\psi,\varphi}^\pi \|_1=
\sup_{u\in {\mathcal U}(\pi(M)^{\,\prime})}
| h_{\psi,\varphi}^\pi(u)|\,;$
  \item\label{bas3b} ${F(M|\nu,\varrho)}= \inf_{x\in M_+,\,\text{\rm{invertible}}} \sqrt{\nu(x)\varrho(x^{-1})}\,;$
  \end{enumerate}
\end{lemma}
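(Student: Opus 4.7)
The plan is to trace through the reductions that make these statements consequences of classical operator-algebraic facts, as laid out in \cite{Albe:83}. For the first assertion, I would fix any unital $^*$-representation $\{\pi,{\mathcal H}_\pi\}$ with both $\pi$-fibres non-void and exploit the parameterization \eqref{bas4}: any $\varphi'\in{\mathcal S}_{\pi,M}(\nu)$ and $\psi'\in{\mathcal S}_{\pi,M}(\varrho)$ take the shape $\varphi'=v\varphi$ and $\psi'=w\psi$ with partial isometries $v,w\in\pi(M)^{\,\prime}$ obeying $v^*v=p_\pi^{\,\prime}(\varphi)$ and $w^*w=p_\pi^{\,\prime}(\psi)$, whence
\[
|\langle\psi',\varphi'\rangle_\pi|=|h_{\psi,\varphi}^\pi(v^*w)|
\]
with $v^*w\in\pi(M)^{\,\prime}$ of operator norm at most one. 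Taking the supremum over $\varphi',\psi'$ within $\pi$ is thus bounded above by $\|h_{\psi,\varphi}^\pi\|_1$; conversely, for each $u\in{\mathcal U}(\pi(M)^{\,\prime})$ the vector $u\psi$ still lies in ${\mathcal S}_{\pi,M}(\varrho)$, which delivers $|h_{\psi,\varphi}^\pi(u)|\le F(M|\nu,\varrho)$. Independence from the chosen $\pi$ is then established by embedding any two competing representations into their direct sum and identifying the resulting functionals. The Russo--Dye theorem applied to the $vN$-algebra $\pi(M)^{\,\prime}$ (whose unit ball is the norm closure of the convex hull of its unitaries) identifies $\|h_{\psi,\varphi}^\pi\|_1$ with $\sup_{u\in{\mathcal U}(\pi(M)^{\,\prime})}|h_{\psi,\varphi}^\pi(u)|$, closing the chain of equalities.

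For the upper estimate in the second assertion, I would fix an invertible $x\in M_+$ and $u\in{\mathcal U}(\pi(M)^{\,\prime})$. Since $\pi(\sqrt{x})$ is bounded, positive and invertible with $\pi(\sqrt{x})^{-1}=\pi(\sqrt{x^{-1}})$, and $u$ commutes with every element of $\pi(M)$, one may factor
\[
h_{\psi,\varphi}^\pi(u)=\bigl\langle\pi(\sqrt{x})^{-1}u\psi,\,\pi(\sqrt{x})\varphi\bigr\rangle_\pi
\]
and apply Cauchy--Schwarz. A short manipulation using $u^*u=\mathsf{1}$ and $u^*\pi(x^{-1})u=\pi(x^{-1})$ collapses the two norms to $\varrho(x^{-1})$ and $\nu(x)$ respectively. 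Combined with the first assertion and the infimum over invertible $x\in M_+$, this produces $F(M|\nu,\varrho)\le\inf_x\sqrt{\nu(x)\,\varrho(x^{-1})}$.

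The reverse inequality is where the genuine work lies and is the step I expect to be the main obstacle. The strategy is to show that the Cauchy--Schwarz estimate just established can be made arbitrarily sharp by a judicious choice of $x$. Given $\varepsilon>0$, I would pick $u_\varepsilon\in{\mathcal U}(\pi(M)^{\,\prime})$ nearly attaining $F(M|\nu,\varrho)$ and then seek an invertible $x_\varepsilon\in M_+$ for which the vectors $\pi(\sqrt{x_\varepsilon})^{-1}u_\varepsilon\psi$ and $\pi(\sqrt{x_\varepsilon})\varphi$ become asymptotically colinear in ${\mathcal H}_\pi$. The natural candidate for such a positive element arises from the polar decomposition of $h_{\psi,\varphi}^\pi$ on $\pi(M)^{\,\prime}$: its modulus is a positive operator in $\pi(M)^{\,\prime\prime}$ whose spectral functional calculus, transported back to $M$ by Kaplansky density together with \eqref{bas3a} and regularized by adding a small multiple of $\mathsf{1}$ to ensure invertibility in $M$, supplies $x_\varepsilon$. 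A routine $\sigma$-strong approximation argument controls the loss of inner product through the density step. This construction, carried out in full in Corollaries 1, 2 and Theorem 3 of \cite{Albe:83}, delivers the matching lower bound and completes the proof.
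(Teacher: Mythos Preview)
Your proof of \eqref{bas3aa} is correct but handles representation-independence by a different device than the paper. You embed competing representations into their direct sum and use the transitivity of \eqref{bas4} on the common fibres there, together with the fact that compressing an element of the direct-sum commutant to a diagonal block does not increase operator norm, to conclude that $\|h_{\psi,\varphi}^\pi\|_1$ is independent of $\pi$. The paper instead keeps $\pi$ fixed: given any $\varepsilon$-optimal pair $(\psi_\varepsilon,\varphi_\varepsilon)$ living in some other representation $\pi_\varepsilon$, it builds the bounded sesquilinear form $(\pi(x)\psi,\pi(y)\varphi)\mapsto\langle\pi_\varepsilon(x)\psi_\varepsilon,\pi_\varepsilon(y)\varphi_\varepsilon\rangle_{\pi_\varepsilon}$ on $[\pi(M)\psi]\times[\pi(M)\varphi]$, represents it by some $K\in(\pi(M)')_1$, and reads off $\langle\psi_\varepsilon,\varphi_\varepsilon\rangle_{\pi_\varepsilon}=h_{\psi,\varphi}^\pi(K)\le\|h_{\psi,\varphi}^\pi\|_1$ without ever leaving $\pi$. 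Both routes then close via Russo--Dye. Yours is the more elementary; the paper's sesquilinear-form construction has the advantage that it is reused verbatim later (e.g.\ in the proof of Lemma~\ref{repf}), which is presumably why it is preferred there.

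For \eqref{bas3b} the paper, like you, simply cites \cite{Albe:83}; your Cauchy--Schwarz derivation of the upper bound is correct and more explicit than anything the paper writes out. One slip in your lower-bound sketch: the modulus $|h_{\psi,\varphi}^\pi|$ from polar decomposition is a positive \emph{linear form} on $\pi(M)'$, not an operator in $\pi(M)''$. The object you are after is the positive self-adjoint $A\mathop{\eta}\pi(M)''$ with $p_\pi(\varphi)v_{\psi,\varphi}^{\pi*}\psi=A\varphi$ obtained from $|h_{\psi,\varphi}^\pi|\ge 0$ via Lemma~\ref{affipos}; its regularized spectral truncations, transported to $M_+$ by Kaplansky density, are the correct candidates for $x_\varepsilon$.
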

\begin{proof}
Due to its importance a proof of \eqref{bas3aa} will be included. Let  $N=\pi(M)^{\,\prime\prime}$, and be  $\varepsilon >0$ a real. Let $y$ be chosen from the unit ball of $N^{\,\prime}$, $y\in (N^{\,\prime})_1$, such that
\begin{equation}\label{approx}
{h^\pi_{\psi,\varphi}}(y)=|{h^\pi_{\psi,\varphi}}(y)|\geq\|{h^\pi_{\psi,\varphi}}\|_1-\varepsilon
\end{equation}
is fulfilled. Following \cite{DyRu:66} the unit ball $(A)_1$ in a unital
$C^*$-algebra $A$ is the uniform closure of the convex hull of the unitary
operators ${\mathcal U}(A)$ of $A$, $(A)_1=[{\mathrm{conv}}{\,\mathcal U}(A)]$.
In line with this, \eqref{approx}  can be assumed  to hold
with $y=\sum_{j=1}^n {\lambda}_{j}u_{j}$, for ${\lambda_j} \in {{\mathbb{R}}}_{+}$
with $\sum_{j=1}^n {\lambda_j}=1$, and $u_j\in{\mathcal U}(N^{\,\prime})$, for all $j\leq n$,  with some $n\in {{\mathbb{N}}}$.
Then obviously
\begin{equation}\label{oben}
 {h^\pi_{\psi,\varphi}}(y)={\biggl|\sum_{j=1}^n {\lambda_j}{h^\pi_{\psi,\varphi}}(u_{j})\biggr|}
\leq {\sum_{j=1}^n {\lambda_j}|{h^\pi_{\psi,\varphi}}(u_{j})|}
\end{equation}
is fulfilled. Since from $\psi \in {\mathcal S}_{\pi,M}({\varrho})$ and $u_j\in{\mathcal U}(N^{\,\prime})$
also $u_{j}\psi \in {\mathcal S}_{\pi,M}({\varrho})$ follows, according to the Definition \ref{genprob.2} we see that, for any $j$, the estimate $$|h^\pi_{\psi,\varphi}(u_{j})|=|{\langle u_{j}\psi,\varphi \rangle_\pi}|
\leq F(M|\nu,\varrho)$$ is obtained, and accordingly, (\ref{oben}) can be
continued with the result
$F(M|\nu,\varrho)\geq {h^\pi_{\psi,\varphi}}(y)$. The latter together
with (\ref{approx}) yields
$F(M|\nu,\varrho)\geq \|{h^\pi_{\psi,\varphi}}\|_1-\varepsilon $,
for any $ \varepsilon >0$. Hence, we have arrived at
$F(M|\nu,\varrho)\geq\|h^\pi_{\psi,\varphi}\|_1$, i.e.
$\|{h^\pi_{\psi,\varphi}}\|_1$ is a lower bound for
$F(M|\nu,\varrho)$.
On the other hand, we are going to prove now that $\|{h^\pi_{\psi,\varphi}}\|_1$ also is an upper bound for
$F(M|\nu,\varrho)$.  In fact,  according to Definition \ref{genprob.2} there exist a unital $^*$-representation $\{\pi_\varepsilon,{\mathcal H}_{\pi_\varepsilon}\}$ and   $\psi_\varepsilon \in {\mathcal S}_{\pi_\varepsilon,M}({\varrho}), \varphi_\varepsilon
\in {\mathcal S}_{\pi_\varepsilon,M}({\nu})$ such that
\begin{equation}\label{unten}
\langle\psi_\varepsilon,\varphi_\varepsilon\rangle_{\pi_\varepsilon}\geq F(M|\nu,\varrho)-\varepsilon
\end{equation}
Note that by definition of $\varphi$, $\psi$, $\varphi_\varepsilon$ and $\psi_\varepsilon$, for $x, z\in M$ the mapping  $$\pi(M)\psi\times \pi(M)\varphi\ni \pi(x)\psi \times \pi(z)\varphi\longmapsto \langle \pi_\varepsilon(x)\psi_\varepsilon,\pi_\varepsilon(z)\varphi_\varepsilon\rangle_{\pi_\varepsilon}$$ is sesquilinear and of norm one, and is obeying
\begin{equation}\label{unten1}
 \langle \pi_\varepsilon(w x)\psi_\varepsilon,\pi_\varepsilon(z)\varphi_\varepsilon\rangle_{\pi_\varepsilon}= \langle \pi_\varepsilon(x)\psi_\varepsilon,\pi_\varepsilon(w^*z)\varphi_\varepsilon\rangle_{\pi_\varepsilon}
\end{equation}
for each triplet $x, w, z\in M$.
The unique extension of this map to a sesquilinear mapping over $p^{\,\prime}_{\pi}(\psi){\mathcal H}_{\pi}\times\, p^{\,\prime}_{\pi}(\varphi){\mathcal H}_{\pi}$ has norm one, too, and in view of \eqref{unten1} by standard calculus then can be represented as
\[
\langle \pi_\varepsilon(x)\psi_\varepsilon,\pi_\varepsilon(z)\varphi_\varepsilon\rangle_{\pi_\varepsilon}=
\langle K\pi(x)\psi,\pi(z)\varphi\rangle_\pi
\]
for all $x,z\in M$, with some $K\in (N^{\,\prime})_1$. Thus especially, for $x=z={\mathsf 1}$ one infers that
\[
\langle \psi_\varepsilon,\varphi_\varepsilon\rangle_{\pi_\varepsilon}=
\langle K\psi,\varphi\rangle_\pi=h^\pi_{\psi,\varphi}(K)\leq \|h^\pi_{\psi,\varphi}\|_1
\]
has to be fulfilled.  Thus, in view of  \eqref{unten} we get
 $\|{h^\pi_{\psi,\varphi}}\|_1 \geq
F(M|\nu,\varrho)-\varepsilon$. Hence, since $\varepsilon>0$ can be arbitrarily chosen, we have arrived at the fact that
$\|{h^\pi_{\psi,\varphi}}\|_1$ is an  upper bound of
$F(M|\nu,\varrho)$ as well, and therefore in view of the above  $$F(M|\nu,\varrho)=\|{h^\pi_{\psi,\varphi}}\|_1$$ is seen to hold. Also, since $(N^{\,\prime})_1$ is the closed convex hull of the unitaries of $N^{\,\prime}$,
$$
\|{h^\pi_{\psi,\varphi}}\|_1=
\sup_{u\in {\mathcal U}(\pi(M)^{\,\prime})}
| h_{\psi,\varphi}^\pi(u)|
$$
has to be fulfilled, and thus finally  \eqref{bas3aa} follows.
\end{proof}
Each unitary orbit ${\mathcal U}(\pi(M)^{\,\prime})\chi$ is a
special subset of the $\pi$-fibre \eqref{bas4}. Thus,
the first two items of the following result at once can be seen from Lemma \ref{bas3}\,\eqref{bas3aa}.
\begin{theorem}\label{bas5}
Suppose $\{\pi,{\mathcal H}_\pi\}$ is a unital $^*$-representation such that the $\pi$-fibres of
$\nu,\varrho\in  {\mathcal S}(M)$ exist. For given $\varphi\in {\mathcal S}_{\pi,M}(\nu)$ and
$\psi\in {\mathcal S}_{\pi,M}(\varrho)$ the following hold\textup{:}
\begin{enumerate}
\item\label{bas51}
$d_B(M|\nu,\varrho)=\inf_{u\in {\mathcal U}(\pi(M)^{\,\prime})}
\|u\psi-\varphi\|_\pi$\,;
\item\label{bas52}
$d_B(M|\nu,\varrho)=\inf_{\psi'\in {\mathcal S}_{\pi,M}(\varrho)\phantom{)}}
 \|\psi'-\,\varphi\|_\pi$\,;
\item\label{bas53}
$d_B(M|\nu,\varrho)=\|\psi-\varphi\|_\pi \Longleftrightarrow\ h_{\psi,\varphi}^\pi\geq 0$.
\end{enumerate}
\end{theorem}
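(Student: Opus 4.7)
My plan is to reduce all three items to Lemma \ref{bas3}\,\eqref{bas3aa}, with a single elementary norm identity doing most of the work. Since $\varphi, \psi$ are unit vectors and $u \in {\mathcal U}(\pi(M)^{\,\prime})$ is unitary, the vector $u\psi$ is again a unit vector, hence
\[
\|u\psi - \varphi\|_\pi^{2} \;=\; 2 - 2\,\Re\,h_{\psi,\varphi}^{\pi}(u);
\]
combine this with the basic relation $d_B(M|\nu,\varrho)^{2} = 2\bigl(1 - F(M|\nu,\varrho)\bigr)$ from \eqref{pcont.1}.

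For item \eqref{bas51} I would observe that the map $u \mapsto e^{i\theta} u$ preserves ${\mathcal U}(\pi(M)^{\,\prime})$, so $\sup_{u} \Re\,h_{\psi,\varphi}^{\pi}(u) = \sup_u |h_{\psi,\varphi}^{\pi}(u)|$; by Lemma \ref{bas3}\,\eqref{bas3aa} this supremum equals $F(M|\nu,\varrho)$, and minimising the squared norm identity yields the claim. Item \eqref{bas52} then follows by sandwiching: the inclusion ${\mathcal U}(\pi(M)^{\,\prime})\psi \subset {\mathcal S}_{\pi,M}(\varrho)$ visible from \eqref{bas4} gives one direction; for the other, use \eqref{bas4} to write any $\psi' \in {\mathcal S}_{\pi,M}(\varrho)$ as $\psi' = v\psi$ with $v \in \pi(M)^{\,\prime}$ a partial isometry, which places $v$ in the closed unit ball of $\pi(M)^{\,\prime}$. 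Then $|h_{\psi,\varphi}^{\pi}(v)| \leq \|h_{\psi,\varphi}^{\pi}\|_1 = F$ by Lemma \ref{bas3}\,\eqref{bas3aa}, and applying the same norm identity with $v$ in place of $u$ yields $\|\psi' - \varphi\|_\pi^2 \geq 2(1-F) = d_B^2$.

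The main obstacle is item \eqref{bas53}, which is a rigidity statement. Taking $u = {\mathsf 1}$ in the norm identity, the equality $\|\psi - \varphi\|_\pi = d_B(M|\nu,\varrho)$ is equivalent to $\Re\,h_{\psi,\varphi}^{\pi}({\mathsf 1}) = F$. Because
\[
F \;=\; \|h_{\psi,\varphi}^{\pi}\|_1 \;\geq\; |h_{\psi,\varphi}^{\pi}({\mathsf 1})| \;\geq\; \Re\,h_{\psi,\varphi}^{\pi}({\mathsf 1}),
\]
such equality forces the entire chain to collapse, so that $h_{\psi,\varphi}^{\pi}({\mathsf 1}) = \|h_{\psi,\varphi}^{\pi}\|_1$ must hold. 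The key step is then to invoke the standard characterisation recalled in subsection \ref{bas1}, that a bounded linear form on the unital ${\mathsf C}^*$-algebra $\pi(M)^{\,\prime}$ is positive exactly when its functional norm equals its value at the unit; this converts the previous equality into $h_{\psi,\varphi}^{\pi} \geq 0$. The converse is immediate: positivity gives $h_{\psi,\varphi}^{\pi}({\mathsf 1}) = \|h_{\psi,\varphi}^{\pi}\|_1 = F$ as a non-negative real, and the norm identity then returns $\|\psi - \varphi\|_\pi^2 = 2(1-F) = d_B^2$.
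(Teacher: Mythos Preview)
Your proof is correct and follows essentially the same route as the paper: both arguments reduce items \eqref{bas51} and \eqref{bas52} to Lemma~\ref{bas3}\,\eqref{bas3aa} via the elementary norm identity and the relation \eqref{pcont.1}, and both handle item \eqref{bas53} by invoking the standard characterisation of positivity ($\|f\|_1 = f({\mathsf 1})$) on $\pi(M)^{\,\prime}$. Your write-up is simply more explicit than the paper's terse remarks before and after the theorem statement.
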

Relating \eqref{bas53}, note that $h_{\psi,\varphi}^\pi\geq 0$ is equivalent
to $\langle\psi,\varphi\rangle_\pi=h_{\psi,\varphi}^\pi({\mathsf 1})=\| h_{\psi,\varphi}^\pi \|_1$.
The latter is equivalent to  $d_B(M|\nu,\varrho)=\|\psi-\varphi\|_\pi$, by \eqref{pcont.1} and
Lemma \ref{bas3} \eqref{bas3aa}.
\begin{remark}\label{rem1}
\begin{enumerate}
\item\label{rem10}
According to Lemma \ref{bas3}\,\eqref{bas3aa} and Theorem \ref{bas5}\,\eqref{bas51}, ${F}$ and $d_B$ can be calculated in each $^*$-representation
$\pi$ where both fibres exist.
\item\label{rem10a}
By Lemma \ref{bas3}\,\eqref{bas3b}, ${F(M|\cdot,\cdot)}$ is an infimum of continuous functions. Thus it has to be upper semi-continuous. That is, for
sequences $\{\nu_n\}$ and $\{\varrho_n\}$ of states obeying $\lim_n \nu_n=\nu$ and  $\lim_n \varrho_n=\varrho$ one always has the estimate
\begin{equation}\label{gemhalbstet}
 {F(M|\nu,\varrho)}\geq \limsup_n {F(M|\nu_n,\varrho_n)}
\end{equation}
and for $P$ accordingly.
\item\label{rem11}
For $\mu\in  {\mathcal S}(M)$ and $\pi$ with ${\mathcal S}_{\pi,M}(\mu)\not=\emptyset$
a unique normal state $\mu_\pi$ on
$\pi(M)^{\,\prime\prime}$ with $\mu=\mu_\pi\circ\,\pi$ exists. Let $s(\mu_\pi)\in \pi(M)^{\,\prime\prime}$ be the support orthoprojection of $\mu_\pi$. It is useful to keep in mind that for each  $\xi\in {\mathcal S}_{\pi,M}(\mu)$ one has
\begin{equation}\label{traeger}
   s(\mu_\pi)=p_\pi(\xi)
\end{equation}
Also, by \eqref{bas3a} and \eqref{bas4} one has that
${\mathcal S}_{\mathrm{id},\pi(M)^{\,\prime\prime}}(\mu_\pi)={\mathcal S}_{\pi,M}(\mu)$,
with the trivial representation $\mathrm{id}$ of $\pi(M)^{\,\prime\prime}$ on ${\mathcal H}_\pi$.
\item\label{vN2}
Note that Definition \ref{budi} and Definition \ref{genprob.2} admit  straightforward extensions of $d_B$ and $F$ (and $P$ accordingly) to general positive linear forms. This  extensions   tacitely  will be made use of in all cases if this seems useful. Each  vector $\xi$ of the $\pi$-fibre ${\mathcal S}_{\pi,M}(\mu)$ of  $\mu\in M_+^*$ then is of norm $$\|\xi\|=\sqrt{\mu({\mathsf 1})}$$
\item\label{rem12}
For a $vN$-algebra $M$ acting over some Hilbert space ${\mathcal H}$, with normal state space ${\mathcal S}_0(M)$, suppose $\nu,\varrho\in {\mathcal S}_0(M)$ can be implemented by vectors. By \eqref{rem10}, for calculating ${F}$ and $d_B$ the identity representation $\pi=\mathrm{id}$ may be referred to, and then
${\mathcal S}_M(\nu)={\mathcal S}_{{\mathrm{id}},M}(\nu)$, $h_{\psi,\varphi}=h_{\psi,\varphi}^{\mathrm{id}}$, $v_{\psi,\varphi}=v_{\psi,\varphi}^{\mathrm{id}}$,
$p(\varphi)=p_{\mathrm{id}}(\varphi)$ and $p^{\,\prime}(\varphi)=p_{\mathrm{id}}^{\,\prime}(\varphi)$ will be abbreviated.
\end{enumerate}
\end{remark}
Let $M$ be a unital ${\mathsf C}^*$-algebra and $\nu,\varrho\in {\mathcal S}(M)$. According to Definition \ref{genprob.2}, if we let $\varphi$ and $\psi$ vary through the $\pi$-fibres of $\nu$ and $\varrho$ with respect to any unital $^*$-representation $\{\pi,{\mathcal H}_\pi\}$ of $M$
such that the $\pi$-fibres of $\nu,\varrho$ both exist,  for each finite positive decomposition of the unity $\{x\}\subset M_+$ with the help of the Cauchy-Schwarz inequality the following conclusions can be seen to hold
\begin{eqnarray*}
  {F(M|\nu,\varrho)} &=& \sup_{\psi,\varphi,\pi}|\langle\psi,\varphi\rangle_\pi|=\sup_{\psi,\varphi,\pi}\Bigl|\Bigl\langle\sum_k \pi(x_k) \psi,\varphi\Bigr\rangle_\pi\Bigr| = \sup_{\psi,\varphi,\pi}\Bigl|\sum_k\langle  \pi(x_k)\psi,\varphi\rangle_\pi\Bigr| \\
   &\leq & \sup_{\psi,\varphi,\pi} \sum_k |\langle  \pi(x_k) \psi,\varphi\rangle_\pi|=\sup_{\psi,\varphi,\pi} \sum_k|\langle \sqrt{ \pi(x_k)}\, \psi,\sqrt{ \pi(x_k)}\,\varphi\rangle_\pi|\\
   &\leq & \sup_{\psi,\varphi,\pi}\sum_k \|\sqrt{ \pi(x_k)}\, \psi\|_\pi\,\|\sqrt{ \pi(x_k)}\, \varphi\|_\pi = \sum_k\sqrt{\varrho(x_k)\nu(x_k)}
\end{eqnarray*}
Hence, for each two states  $\nu,\varrho$ and any finite positive decomposition of the unity $\{x\}$ in  a unital ${\mathsf C}^*$-algebra $M$ the following useful estimate is fulfilled
\begin{equation}\label{bas3cc}
  {F(M|\nu,\varrho)}\leq \sum_k\sqrt{\nu(x_k)\varrho(x_k)}
\end{equation}
For ${\mathsf W}^*$-algebras from \eqref{bas3cc} another representation arises \cite[{\sc{Corollary 2}},\,(1)]{AlUh:00.1}.
\begin{lemma}\label{bas3ccc}
Let $N$ be a ${\mathsf W}^*$-algebra, $\nu,\varrho\in {\mathcal S}(N)$. Then one has
$$
    {F(N|\nu,\varrho)}=\inf_{\{x\}}\sum_k \sqrt{\nu(x_k)\varrho(x_k)}
$$
where $\{x\}$ extends over all finite positive decompositions of the unity within $N$.
\end{lemma}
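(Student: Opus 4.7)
The upper bound $F(N|\nu,\varrho)\leq \inf_{\{x\}}\sum_k\sqrt{\nu(x_k)\varrho(x_k)}$ is in hand, since \eqref{bas3cc} has been established on any unital ${\mathsf C}^*$-algebra. The plan is to supply the reverse inequality by feeding off the dual representation of Lemma \ref{bas3}\,\eqref{bas3b},
\[
F(N|\nu,\varrho)=\inf_{x\in N_+,\,\text{invertible}}\sqrt{\nu(x)\varrho(x^{-1})},
\]
which is the piece that actually exploits the operator calculus. What I need is: for each invertible $x\in N_+$ and each $\varepsilon>0$, a finite positive decomposition $\{e_k\}$ of the unity obeying $\sum_k\sqrt{\nu(e_k)\varrho(e_k)}\leq \sqrt{\nu(x)\varrho(x^{-1})}+\varepsilon$.

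Such a decomposition I would read off from the spectral calculus of $x$ within $N$. Since $N$ is a ${\mathsf W}^*$-algebra, all spectral projections of $x$ lie in $N$. The spectrum of the invertible positive $x$ is contained in some interval $[a,b]\subset(0,\infty)$; partitioning $[a,b]$ into sufficiently fine subintervals $I_k$, picking $\lambda_k\in I_k$, and setting $e_k=\chi_{I_k}(x)$ produces mutually orthogonal projections $e_k\in N$ summing to ${\mathsf 1}$. Uniform spectral estimates yield $\sum_k\lambda_k e_k\to x$ and $\sum_k\lambda_k^{-1} e_k\to x^{-1}$ in norm, so $\sum_k\lambda_k\,\nu(e_k)\to\nu(x)$ and $\sum_k\lambda_k^{-1}\varrho(e_k)\to\varrho(x^{-1})$ as the mesh shrinks.

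The bridge between the sum over $\{e_k\}$ and $\sqrt{\nu(x)\varrho(x^{-1})}$ is the Cauchy--Schwarz inequality, applied in the form
\[
\sum_k\sqrt{\nu(e_k)\varrho(e_k)}=\sum_k\sqrt{\lambda_k\nu(e_k)}\cdot\sqrt{\lambda_k^{-1}\varrho(e_k)}\leq\sqrt{\sum_k\lambda_k\nu(e_k)}\cdot\sqrt{\sum_k\lambda_k^{-1}\varrho(e_k)}.
\]
Letting the partition shrink then yields $\sum_k\sqrt{\nu(e_k)\varrho(e_k)}\leq \sqrt{\nu(x)\varrho(x^{-1})}+\varepsilon$. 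Taking first the infimum over decompositions and afterwards over invertible $x\in N_+$, Lemma \ref{bas3}\,\eqref{bas3b} delivers the reverse inequality, and combined with \eqref{bas3cc} the asserted equality follows. The only mildly delicate point is the quantitative control of $\|x^{-1}-\sum_k\lambda_k^{-1}e_k\|$; but since $a>0$, the map $t\mapsto t^{-1}$ is Lipschitz on $[a,b]$ and the spectral calculus furnishes the uniform bound directly, so no genuine obstacle arises.
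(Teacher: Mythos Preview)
Your argument is correct. The upper bound is exactly \eqref{bas3cc}, and for the reverse you combine Lemma~\ref{bas3}\,\eqref{bas3b} with spectral calculus in $N$: the Cauchy--Schwarz step is clean, and the norm approximations $\sum_k\lambda_k e_k\to x$, $\sum_k\lambda_k^{-1}e_k\to x^{-1}$ need only boundedness of $\nu,\varrho$, not normality, so the lemma indeed holds for all states in ${\mathcal S}(N)$ as stated.

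There is nothing to compare against in the paper itself: the lemma is stated with a citation to \cite[{\sc Corollary 2},\,(1)]{AlUh:00.1} and no proof is reproduced. Your route---spectral partition of an invertible $x\in N_+$ plus Cauchy--Schwarz to link the decomposition sum to $\sqrt{\nu(x)\varrho(x^{-1})}$---is a natural and fully self-contained way to recover the result from the two ingredients already available in the paper, namely \eqref{bas3cc} and Lemma~\ref{bas3}\,\eqref{bas3b}.
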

As a useful application from this the following result can be inferred.
\begin{corolla}\label{ctraeger}
Let $\omega,\mu\in  {\mathcal S}(N)$ be normal states in a ${\mathsf W}^*$-algebra $N$, with support orthoprojections $s(\omega), s(\mu)\in N$. Let $p$ be the orthoprojection $p=s(\omega)\vee s(\mu)\in N$, and be $N_p$ the hereditary $vN$-subalgebra $p\,Np$. Then, $$d_B(N|\omega,\mu)= d_B(N_p|\omega_p,\mu_p)$$ holds, with the restrictions of $\omega$ and $\mu$ onto $N_p$, $\omega_p=\omega|N_p$ and $\mu_p=\mu|N_p$
\end{corolla}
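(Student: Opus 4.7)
The plan is to reduce the claim about the Bures distance to one about the fidelity via \eqref{pcont.1}, and then to prove the fidelity identity $F(N|\omega,\mu)=F(N_p|\omega_p,\mu_p)$ by exploiting the variational formula of Lemma \ref{bas3ccc}. Observe first that $\omega_p$ and $\mu_p$ are well-defined normal states on $N_p$ (whose unit is $p$): the inequality $s(\omega)\leq p$ implies $\omega({\mathsf 1}-p)\leq \omega({\mathsf 1}-s(\omega))=0$, so $\omega(p)=1$, and similarly $\mu(p)=1$. Normality is automatic, since $N_p=pNp$ is $\sigma$-weakly closed in $N$.

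For the inequality $F(N|\omega,\mu)\leq F(N_p|\omega_p,\mu_p)$, I take an arbitrary finite positive decomposition $\{y_1,\ldots,y_n\}$ of the unit $p$ within $N_p$ and extend it to $\{y_1,\ldots,y_n,{\mathsf 1}-p\}$, which is a finite positive decomposition of the unit within $N$. Because $\omega({\mathsf 1}-p)=\mu({\mathsf 1}-p)=0$, the extra term contributes zero to the sum $\sum_k\sqrt{\omega(x_k)\mu(x_k)}$, and the remaining terms agree with $\sqrt{\omega_p(y_k)\mu_p(y_k)}$. Applying Lemma \ref{bas3ccc} on the $N$-side and taking the infimum over $\{y_k\}$ yields the desired inequality.

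For the reverse direction, given an arbitrary finite positive decomposition $\{x_1,\ldots,x_n\}$ of the unit of $N$, I form the compressed family $y_k:=px_kp\in (N_p)_+$; clearly $\sum_k y_k=p$, so this is a finite positive decomposition of the unit within $N_p$. Using the fundamental support identity $\omega(a)=\omega(s(\omega)\,a\,s(\omega))$, valid for every $a\in N$, together with $s(\omega)p=s(\omega)$ (which follows from $s(\omega)\leq p$), I compute
\[
\omega(y_k)=\omega(px_kp)=\omega(s(\omega)\,px_kp\,s(\omega))=\omega(s(\omega)\,x_k\,s(\omega))=\omega(x_k),
\]
and analogously $\mu(y_k)=\mu(x_k)$. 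Hence $\sum_k\sqrt{\omega_p(y_k)\mu_p(y_k)}=\sum_k\sqrt{\omega(x_k)\mu(x_k)}$, and taking the infimum over $\{x_k\}$ in $N$ gives $F(N_p|\omega_p,\mu_p)\leq F(N|\omega,\mu)$ via Lemma \ref{bas3ccc}. Combining both estimates yields equality of fidelities and, through \eqref{pcont.1}, equality of Bures distances. The only step requiring genuine attention is the support-preservation computation $\omega(px_kp)=\omega(x_k)$, which is the conceptual heart of the reduction to $N_p$; everything else is bookkeeping around Lemma \ref{bas3ccc}.
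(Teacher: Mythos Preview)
Your proof is correct and follows essentially the same route as the paper's: both arguments reduce to the fidelity identity via \eqref{pcont.1} and then apply Lemma \ref{bas3ccc} in each direction, using the compression $y_k=px_kp$ for one inequality and an extension of decompositions for the other. The only cosmetic difference is in the extension step: the paper distributes $p^\perp$ evenly by setting $x_k=y_k+\tfrac{1}{n}\,p^\perp$, while you append ${\mathsf 1}-p$ as a single additional element; both variants work for the same reason, namely $\omega({\mathsf 1}-p)=\mu({\mathsf 1}-p)=0$.
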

\begin{proof}
In fact, each finite positive decomposition $\{y\}$  of the unity  within $N_p$ (containing $n\in {{\mathbb{N}}}$ elements) by triviality extends to a finite positive decomposition $\{x\}$ of the unity within $N$ when setting $x_k=y_k+\bigl(\frac{1}{n}\bigr) p^\perp$. Thereby, one infers that
\begin{equation}\label{gleichomegamu}
  \omega(x_k)\mu(x_k)=\omega_p(y_k)\mu_p(y_k)
\end{equation}
Thus, by Lemma \ref{bas3ccc}, when applied to $N$, $\omega$, $\mu$ and $N_p$, $\omega_p$, $\mu_p$, respectively, we see
\begin{equation*}
     {F(N|\omega,\mu)}\leq  {F(N_p|\,\omega_p,\mu_p)}
\end{equation*}
On the other hand, for each finite positive decomposition $\{x\}$ of the unity in $N$,
by defining $y_k=p \,x_k p$ a finite positive decomposition of the unity in $N_p$ satisfying \eqref{gleichomegamu} can be obtained. Applying Lemma \ref{bas3ccc} with $N$, $\omega$, $\mu$ and $N_p$, $\omega_p$, $\mu_p$ now yields
\begin{equation*}
   {F(N_p|\,\omega_p,\mu_p))} \leq  {F(N|\omega,\mu)}
\end{equation*}
Thus
$
    {F(N|\omega,\mu)}= {F(N_p|\,\omega_p,\mu_p)}
$ holds,
and then by \eqref{pcont.1} the assertion follows.
\end{proof}
Close this paragraph with some special cases where Lemma \ref{bas3} can be made more explicit by an instructive formula.
Let $\tau$ be
a (lower semicontinuous)
trace on $M_+$ (see e.g.~in \cite[6.1.]{Dixm:64}). Then one has $^*$-ideals
${\mathcal L}^2(M,\tau)=\{x\in M:\tau(x^*x)<\infty\}$ and
${\mathcal L}^1(M,\tau)=\{x\in M:\,x=y^*z,\ y,z\in {\mathcal L}^2(M,\tau)\}$.
It is known that, for $z\in {\mathcal L}^2(M,\tau)$ and
$y\in {\mathcal L}^1(M,\tau)$, we have a
unique positive linear form $\tau^z$ and bounded linear form $\tau_y$ on $M$, with
$\tau^z(x)=\tau(z^* xz)$ and $\tau_y(x)=\tau(yx)$, respectively, for each $x\in M$.
Let $I(M,\tau)=\{x\in M:\,\tau(x^*x)=0\}$. This is a $^*$-ideal
in ${\mathcal L}^2(M,\tau)$. Consider the
completion $L^2(M,\tau)$ of the space of
equivalence classes $\eta_x$ modulo $I(M,\tau)$ under the inner product
$\langle \eta_x, \eta_y\rangle=\tau(y^* x)$, $x,y\in {\mathcal L}^2(M,\tau)$. By standard arguments
the latter is well-defined and extends to a scalar product
on $L^2(M,\tau)$, which then is a Hilbert space. Let $\pi: M\ni x\,\longmapsto\,\pi(x)$
be the $^*$-representation
of $M$ over $L^2(M,\tau)$
which can be uniquely given through $\pi(x)\eta_y=\eta_{xy}$
, for each
$y\in {\mathcal L}^2(M,\tau)$ and all $x\in M$.
Then, for each $z\in {\mathcal L}^2(M,\tau)$ and $x\in M$ one has
$\tau^z(x)=\langle \pi(x)\eta_{z},\eta_{z}\rangle$.
In carefully analyzing $\pi$
in case of a $vN$-algebra $M$, and
in applying Lemma \ref{bas3} with the
mentioned $^*$-representation $\pi$ and using that a normal trace is
lower semicontinuous, by specializing from \cite[{\sc Corollary 1}]{Albe:02.1} the following hold.
\begin{example}\label{ex1}
Let $M$ be a $vN$-algebra, with normal trace $\tau$. For $x,y\in {\mathcal L}^2(M,\tau)$ and
$a=|x^*|^2=xx^*$, $c=|y^*|^2=yy^*$, one has $\tau^x=\tau_a$ and $\tau^y=\tau_b$, and
then
\begin{equation}\label{bei1}
{F(M|\tau^x,\tau^y)}=\tau(|x^*y|)=\tau(|\,\sqrt{a}\sqrt{c}\,|)=
{F(M|\tau_{a},\tau_{c})}\,.
\end{equation}
In the special case of $M={\mathsf B}({\mathcal H})$ and $\tau={\mathsf{tr}}$ this according to \eqref{pcont.1aa} is equivalent to Uhlmann's formula  \cite{Uhlm:76} for the algebraic transition probability
between two density operators $a$ and $c$ over some Hilbert space ${\mathcal H}$.
\end{example}
\subsubsection{Implementing fidelity and Bures distance by vectors}\label{met}
From each of the first two items of Theorem \ref{bas5} together with
Definition \ref{budi} and \eqref{bas4}
it is obvious that $d_B$ is symmetric in its arguments, obeys the triangle inequality with respect to any three states of $M$ (apply Theorem \ref{bas5} with the GNS-representation $\pi$ in respect of the sum of the three states in question),
and is vanishing between
$\nu$ and $\varrho$ if, and only if, $\nu=\varrho$. Thus, $d_B$ yields another
distance function on ${\mathcal S}(M)$. By Theorem \ref{bas5}\,\eqref{bas52}, $d_B$ is the natural
distance function, in each space of $\pi$-fibres.
Also, $d_B$ is topologically
equivalent to the functional distance $d_1$, which
is defined as $d_1(\nu,\varrho)=\|\nu-\varrho\|_1$, for $\nu,\varrho\in {\mathcal S}(M)$.
For quantitative estimates,
see \cite[{\sc{Proposition 1}},\,formula (1.2)]{AlUh:00.1}.
We do not prove all relevant facts but instead remark that, in case of normal
states on ${\mathsf W}^*$-algebras, the equivalence in question results from the
estimates given in
\cite{Bure:69,Arak:71,Arak:72}, essentially.
This extends to unital ${\mathsf C}^*$-algebras  by means of the following auxiliary result.
\begin{lemma}\label{eqdist}
Let $\nu,\varrho\in{\mathcal S}(M)$, and be $\{\pi,{\mathcal H}_\pi\}$ such that the $\pi$-fibres of
$\nu$ and $\varrho$ both exist. Then, the following hold\textup{:}
\begin{enumerate}
\item\label{eqdist1}
$\|\nu-\varrho\|_1=\|\nu_\pi- \varrho_\pi\|_1\,;$
\item\label{eqdist2}
$d_B(M|\nu,\varrho)=d_B(\pi(M)^{\,\prime\prime}|\nu_\pi,\varrho_\pi)\,.$
\end{enumerate}
\end{lemma}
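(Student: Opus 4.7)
\medskip
\noindent\textbf{Proof plan.}
Both assertions share a common mechanism: each side of the equality is governed entirely by the commutant datum $\pi(M)'=(\pi(M)'')'$ together with the vectors implementing $\nu$ and $\varrho$, so it is enough to recognise the relevant extremal problem as being the same when posed over $M$ via $\pi$ and when posed over $\pi(M)''$ via the identical representation on $\mathcal H_\pi$.

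\smallskip
\noindent For \eqref{eqdist1} the plan is to pass to the quotient. First I would observe that $\ker\pi$ lies in the null ideal of both $\nu$ and $\varrho$: indeed $\nu=\nu_\pi\circ\pi$, so $k\in\ker\pi$ implies $\nu(k)=0$, and the same for $\varrho$. Hence $\nu-\varrho$ annihilates $\ker\pi$ and descends to a form on $M/\ker\pi$. Since $\pi$ induces an isometric $^*$-isomorphism $M/\ker\pi\cong\pi(M)\subset\mathsf B(\mathcal H_\pi)$, a straightforward quotient-norm argument (using $\|[x]\|=\inf_{k\in\ker\pi}\|x+k\|$ and noting that the infimum can be approached within any tolerance $\varepsilon$) gives
\[
\|\nu-\varrho\|_1=\sup_{y\in(\pi(M))_1}|(\nu_\pi-\varrho_\pi)(y)|.
\]
Now $\nu_\pi-\varrho_\pi$ is a difference of two normal states on the $vN$-algebra $\pi(M)''$, hence $\sigma$-weakly continuous on bounded sets. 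By the Kaplansky density theorem $(\pi(M))_1$ is $\sigma$-weakly dense in $(\pi(M)'')_1$, and taking the supremum over the larger ball does not change the right-hand side. This delivers $\|\nu-\varrho\|_1=\|\nu_\pi-\varrho_\pi\|_1$.

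\smallskip
\noindent For \eqref{eqdist2} the plan is to evaluate both Bures distances via the functional-norm representation of the fidelity provided by Lemma \ref{bas3}\,\eqref{bas3aa}. Pick $\varphi\in{\mathcal S}_{\pi,M}(\nu)$ and $\psi\in{\mathcal S}_{\pi,M}(\varrho)$; by Remark \ref{rem1}\,\eqref{rem11} these same vectors realise $\nu_\pi$ and $\varrho_\pi$ in the identical representation $\mathrm{id}$ of $\pi(M)''$ on $\mathcal H_\pi$, so both $\mathrm{id}$-fibres are non-empty and Lemma \ref{bas3}\,\eqref{bas3aa} applies to $\pi(M)''$ as well. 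The auxiliary hermitian conjugate form used in that lemma is defined on the commutant, and critically
\[
\bigl(\pi(M)''\bigr)'=\pi(M)',
\]
so the form $h^{\mathrm{id}}_{\psi,\varphi}(z)=\langle z\psi,\varphi\rangle_\pi$ on $(\pi(M)'')'$ coincides verbatim with $h^\pi_{\psi,\varphi}$ on $\pi(M)'$. Applying Lemma \ref{bas3}\,\eqref{bas3aa} to $M$ and $\pi$, and then again to $\pi(M)''$ and $\mathrm{id}$, yields
\[
F(M|\nu,\varrho)=\|h^\pi_{\psi,\varphi}\|_1=\|h^{\mathrm{id}}_{\psi,\varphi}\|_1=F(\pi(M)''|\nu_\pi,\varrho_\pi),
\]
and the identity \eqref{pcont.1} converts this into the claimed equality for $d_B$.

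\smallskip
\noindent The only genuinely delicate step is the density argument in \eqref{eqdist1}: one must invoke normality of $\nu_\pi-\varrho_\pi$ to trade the $\sup$ over $(\pi(M)'')_1$ for the $\sup$ over $(\pi(M))_1$. Everything else is essentially bookkeeping once Lemma \ref{bas3} and the identification ${\mathcal S}_{\mathrm{id},\pi(M)''}(\mu_\pi)={\mathcal S}_{\pi,M}(\mu)$ from Remark \ref{rem1} are in hand.
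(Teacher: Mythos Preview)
Your proof is correct and follows essentially the same approach as the paper. For \eqref{eqdist1} the paper short-circuits your quotient argument by invoking the standard fact $\pi(M)_1=\pi(M_1)$ directly before applying Kaplansky density and normality, and for \eqref{eqdist2} the paper uses Theorem~\ref{bas5}\,\eqref{bas51} (the unitary-orbit formula for $d_B$) rather than Lemma~\ref{bas3}\,\eqref{bas3aa} (the fidelity formula), but since both characterisations live on $\pi(M)'=(\pi(M)'')'$ and are linked by \eqref{pcont.1}, the two routes are interchangeable.
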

\begin{proof}
In order to see \eqref{eqdist1} recall first that, since $\pi$ is a $^*$-homomorphism,
according to basic ${\mathsf C}^*$-theory, the unit ball $M_1$ of $M$ is related to the unit ball
$\pi(M)_1$ of  $\pi(M)$ through $\pi(M)_1=\pi(M_1)$. From this by the Kaplansky density
theorem
$$(\pi(M)^{\,\prime\prime})_1=
[\pi(M)_1]^{\,\mathrm{st}}=[\pi(M_1)]^{\,\mathrm{st}}\text{ (strong closure) }$$
is seen.
Owing to normality of $\nu_\pi$, $\varrho_\pi$, for $\nu_\pi-\varrho_\pi$ we then may conclude as
follows:
\begin{equation*}
\begin{split}
\|\nu-\varrho\|_1 & =\sup_{x\in M,\,\|x\|\leq 1} |\nu_\pi(\pi(x))-\varrho_\pi(\pi(x))|\\
& =\sup_{y\in\pi(M)^{\,\prime\prime},\,\|y\|\leq 1} |\nu_\pi(y)-\varrho_\pi(y)|= \|\nu_\pi- \varrho_\pi\|_1\,.
\end{split}
\end{equation*}
To see \eqref{eqdist2}, apply Theorem \ref{bas5}\,\eqref{bas51} to $\nu_\pi$ and $\varrho_\pi$ in respect of the identity representation
of the $vN$-algebra
$\pi(M)^{\,\prime\prime}$. Under these premises
Theorem \ref{bas5}\,\eqref{bas51} yields
$$d_B(\pi(M)^{\,\prime\prime}|\nu_\pi,\varrho_\pi)=\inf_{u\in {\mathcal U}(\pi(M)^{\,\prime})}\|u\psi-\varphi\|_\pi$$ for each $\psi\in{\mathcal S}_{\mathrm{id},\pi(M)^{\,\prime\prime}}(\varrho_\pi)$ and $\varphi\in
{\mathcal S}_{\mathrm{id},\pi(M)^{\,\prime\prime}}(\nu_\pi)$.
In view of Remark \ref{rem1}\,\eqref{rem11} this is $d_B(M|\nu,\varrho)$ as asserted by
Theorem \ref{bas5}\,\eqref{bas51} when applied to $M$, $\pi$, $\nu$
and $\varrho$.
\end{proof}
For the following the trivial extension of $P$ and $F$ to positive linear forms  according to Remark \ref{rem1}\,\eqref{vN2} is taken into account.
\begin{corolla}\label{subadd}
Let $\nu,\varrho\in M_+^*$, for a unital ${\mathsf C}^*$-algebra $M$. Then,
\begin{enumerate}
\item\label{subadd1}
$P(M|\nu,\varrho)\geq P(M|\nu,\varrho_1)+P(M|\nu,\varrho_2)$ provided $\varrho=\varrho_1+\varrho_2$, with $\varrho_i\in M^*_+$;
\item\label{subadd2}
$F(M|\nu,\varrho)=0$ occurs if, and only if,  $\nu\perp\varrho$ is fulfilled.
\end{enumerate}
\end{corolla}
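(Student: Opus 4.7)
My plan is to split the proof into the two independent assertions, noting that both $P$ and $F$ are defined on all of $M_+^*$ by Remark \ref{rem1}\,\eqref{vN2}, and that Lemma \ref{bas3} extends from states to positive forms by homogeneity: if $\nu=\lambda\nu_0,\varrho=\mu\varrho_0$ with $\nu_0,\varrho_0$ states and $\lambda,\mu\geq 0$, then vector implementations rescale by $\sqrt\lambda,\sqrt\mu$ so that $F(M|\nu,\varrho)=\sqrt{\lambda\mu}\,F(M|\nu_0,\varrho_0)$, and the infimum formula Lemma \ref{bas3}\,\eqref{bas3b} carries over multiplicatively.

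\textbf{Part \eqref{subadd1}.} The natural tool is Lemma \ref{bas3}\,\eqref{bas3b}, which (after the extension just mentioned) reads
\[
P(M|\nu,\sigma)=F(M|\nu,\sigma)^{2}=\inf_{x\in M_+,\,\text{invertible}}\nu(x)\,\sigma(x^{-1})
\]
for any $\nu,\sigma\in M_+^*$. Fix such an invertible $x\in M_+$. Linearity of $\varrho_1+\varrho_2=\varrho$ in the second slot gives
\[
\nu(x)\varrho(x^{-1})=\nu(x)\varrho_1(x^{-1})+\nu(x)\varrho_2(x^{-1})\geq P(M|\nu,\varrho_1)+P(M|\nu,\varrho_2),
\]
since each summand dominates the corresponding infimum. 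Taking the infimum over $x$ on the left yields $P(M|\nu,\varrho)\geq P(M|\nu,\varrho_1)+P(M|\nu,\varrho_2)$. No subtle step is involved; this is essentially an exercise in unpacking the variational formula.

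\textbf{Part \eqref{subadd2}.} The plan is to move to a single $^*$-representation, translate the vanishing of $F$ into a support condition, and then recognise this as orthogonality. Via GNS applied to $\nu+\varrho$ (normalised), one obtains a unital $^*$-representation $\{\pi,\mathcal H_\pi\}$ in which both fibres $\mathcal S_{\pi,M}(\nu),\mathcal S_{\pi,M}(\varrho)$ are non-empty; pick $\varphi,\psi$ therein. By Lemma \ref{bas3}\,\eqref{bas3aa} (extended to positive forms),
\[
F(M|\nu,\varrho)=\|h^\pi_{\psi,\varphi}\|_1=\sup_{z\in\pi(M)^{\,\prime}}\frac{|\langle z\psi,\varphi\rangle_\pi|}{\|z\|}.
\]
Hence $F(M|\nu,\varrho)=0$ is equivalent to $h^\pi_{\psi,\varphi}\equiv 0$, i.e.~$\psi\perp\pi(M)^{\,\prime}\varphi$, and taking closures this amounts to $p_\pi(\varphi)\psi=0$. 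By Remark \ref{rem1}\,\eqref{rem11} and \eqref{traeger}, $p_\pi(\varphi)=s(\nu_\pi)$, so the condition is equivalent to
\[
\varrho_\pi\bigl(s(\nu_\pi)\bigr)=\langle s(\nu_\pi)\psi,\psi\rangle_\pi=\|p_\pi(\varphi)\psi\|_\pi^{2}=0.
\]
Since $\nu_\pi,\varrho_\pi$ are normal positive forms on the $vN$-algebra $\pi(M)^{\,\prime\prime}$, the latter is the standard criterion for $\nu_\pi\perp\varrho_\pi$ (disjoint support projections). Finally, Lemma \ref{eqdist}\,\eqref{eqdist1} gives $\|\nu-\varrho\|_1=\|\nu_\pi-\varrho_\pi\|_1$ (and the analogous equality $\|\nu+\varrho\|_1=\|\nu_\pi+\varrho_\pi\|_1$ holds by the same proof), so $\nu_\pi\perp\varrho_\pi$ transfers back to $\nu\perp\varrho$ in $M$, completing the equivalence.

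The one place where I expect to have to be a little careful is the bookkeeping in the second part: the characterisation $F=0\Leftrightarrow \nu\perp\varrho$ only has content once orthogonality is fixed as a notion, and I will have to confirm that the customary identifications (disjoint normal-state supports in $\pi(M)^{\,\prime\prime}$ versus $\|\nu-\varrho\|_1=\|\nu\|_1+\|\varrho\|_1$ in $M^*$) coincide here, which is exactly what Lemma \ref{eqdist}\,\eqref{eqdist1} supplies. Part \eqref{subadd1} is, by contrast, a one-line consequence of Lemma \ref{bas3}\,\eqref{bas3b} once its extension to positive forms is noted.
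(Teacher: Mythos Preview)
Your proof is correct and follows essentially the same approach as the paper: part~\eqref{subadd1} via the variational formula of Lemma~\ref{bas3}\,\eqref{bas3b}, and part~\eqref{subadd2} by translating $F=0$ into $h^\pi_{\psi,\varphi}=0$, then into orthogonality of the support projections $s(\nu_\pi),s(\varrho_\pi)$ in $\pi(M)^{\,\prime\prime}$, and transferring back via Lemma~\ref{eqdist}\,\eqref{eqdist1}. The only cosmetic difference is that the paper phrases the support-orthogonality step through the description $p_\pi(\varphi){\mathcal H}_\pi=[\mathsf{Lin}({\mathcal U}(\pi(M)^{\,\prime})\varphi)]$, whereas you go through $p_\pi(\varphi)\psi=0\Leftrightarrow\varrho_\pi(s(\nu_\pi))=0$; both are equivalent one-line observations.
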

\begin{proof}
 The superadditivity property \eqref{subadd1} immeditately follows from \eqref{pcont.1aa} and Lemma \ref{bas3}\,\eqref{bas3b}.
 In context of  \eqref{subadd2} remind the notion of orthogonality for two positive linear forms in a ${\mathsf C}^*$-algebra $M$: $\nu,\varrho\in M_+^*$ are termed `mutually orthogonal', in sign $\nu\perp \varrho$, if
 $$
 \|\nu-\varrho\|_1=\|\nu\|_1+\|\varrho\|_1
 $$
 Since Lemma \ref{eqdist}\,\eqref{eqdist1} by triviallity extends to positive linear forms, $\nu\perp \varrho$ happens if, and only if, $\nu_\pi \perp \varrho_\pi$, with respect to any unital $^*$-representation $\{\pi,{\mathcal H}_\pi\}$ such that the $\pi$-fibres of
 $\nu$ and $\varrho$ both exist. On the other hand, for normal forms $\nu_\pi, \varrho_\pi$ over the $vN$-algebra $\pi(M)^{\,\prime\prime}$ orthogonality $\nu_\pi \perp \varrho_\pi$ obviously is equivalent to orthogonality of the support orthoprojections, $s(\nu_\pi)s(\varrho_\pi)={\mathsf 0}$. Hence,  $\nu\perp \varrho$ proves equivalent to $s(\nu_\pi)s(\varrho_\pi)={\mathsf 0}$, too. In order to see  \eqref{subadd2} we therefore may content with showing that $s(\nu_\pi)s(\varrho_\pi)={\mathsf 0}$ if, and only if, $F(M|\nu,\varrho)=0$. In line with this, suppose $\varphi\in {\mathcal S}_{\pi,M}(\nu)$ and $\psi\in {\mathcal S}_{\pi,M}(\varrho)$. Then, since $\pi(M)^{\,\prime}$ is linearily generated by its unitary elements,
$
s(\varrho_\pi){\mathcal H}_\pi=p_\pi(\psi){\mathcal H}_\pi=[{\mathsf{Lin}}({\mathcal U}(\pi(M)^{\,\prime})\psi)]
$ and $s(\nu_\pi){\mathcal H}_\pi=p_\pi(\varphi){\mathcal H}_\pi=[{\mathsf{Lin}}({\mathcal U}(\pi(M)^{\,\prime})\varphi)]$ have to be fulfilled. Hence, in view of  Lemma \ref{bas3}\,\eqref{bas3aa}, $s(\nu_\pi)s(\varrho_\pi)={\mathsf 0}$ if, and only if, $F(M|\nu,\varrho)=0$.
\end{proof}
\begin{remark}\label{vN}
\begin{enumerate}
\item\label{vN1}
In view of \eqref{pcont.1}, Lemma \ref{eqdist} \eqref{eqdist2} equivalently says that
\begin{equation}\label{eqdist3}
{F(M|\nu,\varrho)}={F(\pi(M)^{\,\prime\prime}|\nu_\pi,\varrho_\pi)}
\end{equation}
i.e. calculating the ${\mathsf C}^*$-algebraic fidelity can be traced back to considering the fidelity  between suitably chosen normal states over suitably chosen $vN$-algebras.
\item\label{eingstet}
In making use of \eqref{eqdist3} and Lemma \ref{eqdist}\,\eqref{eqdist1} by means of some arguments arising from  modular theory of $vN$-algebras separate continuity of ${F(M|\cdot,\varrho)}$ can be inferred to hold, see \cite[(2.5)]{AlUh:84} and use \eqref{pcont.1aa}, that is
\[
{F(M|\nu,\varrho)}=\lim_n {F(M|\nu_n,\varrho)}
\]
is fulfilled
for each sequence $\{\nu_n\}$ of states obeying $\|\cdot\|_1-\lim_n \nu_n=\nu$.
\item\label{vN3}
If for each $\mu\in M_+^*$ and given central orthoprojection $c$ of the $vN$-algebra $\pi(M)^{\,\prime\prime}$, $c\in \pi(M)^{\,\prime\prime}\cap \pi(M)^{\,\prime}$,   a linear form $\mu_{\pi,c}\in M_+^*$ is defined by $$\mu_{\pi,c}(x)=\langle \pi(x) c\,\varphi,\varphi\rangle_\pi$$ for all $x\in M$, then in view of \eqref{bas4} and because of $c\in \pi(M)^{\,\prime\prime} \cap  \pi(M)^{\,\prime}$ the relation ${\mathcal S}_{\pi,M}(\mu_{\pi,c})=c\,{\mathcal S}_{\pi,M}(\mu)$ has to be fulfilled. Hence, if $c^\perp={\mathsf 1}-c$, then $\mu=\mu_{\pi,c}+\mu_{\pi,c^\perp}$ and ${\mathcal S}_{\pi,M}(\mu)={\mathcal S}_{\pi,M}(\mu_{\pi,c})+{\mathcal S}_{\pi,M}(\mu_{\pi,c^\perp})$.
\end{enumerate}
\end{remark}
Let $\nu,\varrho\in{\mathcal S}(M)$ and be $\{\pi,{\mathcal H}_\pi\}$ a unital $^*$-representation such both  $\pi$-fibres exist. Then, for $\varphi\in {\mathcal S}_{\pi,M}(\nu), \psi\in {\mathcal S}_{\pi,M}(\varrho)$ and central orthoprojection $c\in\pi(M)^{\,\prime\prime}$, $$c\varphi\in {\mathcal S}_{\pi,M}(\nu_{\pi,c}), \ c^\perp\varphi\in {\mathcal S}_{\pi,M}(\nu_{\pi,c^\perp}),\ c\psi\in {\mathcal S}_{\pi,M}(\varrho_{\pi,c}),\  c^\perp\psi\in {\mathcal S}_{\pi,M}(\varrho_{\pi,c^\perp})$$ hold. Thus, since $c\in \pi(M)^{\,\prime\prime} \cap  \pi(M)^{\,\prime}$ is fulfilled,
$
h_{\varphi,\psi}^\pi=h_{c\varphi,c\psi}^\pi+h_{c^\perp\varphi,c^\perp\psi}^\pi
$
follows. This together with   $$\bigl(\pi(M)^{\,\prime}\bigr)_1=\bigl(\pi(M)^{\,\prime}\,c\bigr)_1+\bigl(\pi(M)^{\,\prime}c^\perp\bigr)_1=
\bigl(\pi(M)^{\,\prime}\bigr)_1\,c+\bigl(\pi(M)^{\,\prime}\bigr)_1\,c^\perp$$
implies that $\|h_{\varphi,\psi}^\pi\|_1=\|h_{c\varphi,c\psi}^\pi\|_1+\|h_{c^\perp\varphi,c^\perp\psi}^\pi\|_1$. By Lemma \ref{bas3}\,\eqref{bas3aa} this means that
\begin{equation}\label{vN4a}
{F(M|\nu,\varrho)}={F(M|\nu_{\pi,c},\varrho_{\pi,c})}+{F(M|\nu_{\pi,c^\perp},\varrho_{\pi,c^\perp})}\,.
\end{equation}
In view of \eqref{eqdist3} the terms on the right hand side of \eqref{vN4a} can be also given as
 $${F\bigl(\pi(M)^{\,\prime\prime}|(\nu_{\pi,c})_\pi,(\varrho_{\pi,c})_\pi\bigr)}\text{ and } {F\bigl(\pi(M)^{\,\prime\prime}|(\nu_{\pi,c^\perp})_\pi,(\varrho_{\pi,c^\perp})_\pi\bigr)}$$ respectively. In order to evaluate the latter two expressions, in case of $0<c<{\mathsf 1}$ one also can refer either to the unital $^*$-representation $\pi(M)^{\,\prime\prime}\ni x\mapsto x\,c$ over the Hilbert spaces $c{\mathcal H}_\pi$ with the positive linear forms $\nu_c,\varrho_c$ given by $\nu_c(x)=\langle x c\varphi,c\varphi\rangle_\pi$ and $\varrho_c(x)=\langle x c\psi,c\psi\rangle_\pi$, for $x\in \pi(M)^{\,\prime\prime}c$, or to the unital $^*$-representation $\pi(M)^{\,\prime\prime}\ni x\mapsto x\,c^\perp$ over the Hilbert space $c^\perp{\mathcal H}_\pi$ with the positive linear forms $\nu_{c^\perp},\varrho_{c^\perp}$ given by $\nu_{c^\perp}(x)=\langle x c^\perp\varphi,c^\perp\varphi\rangle_\pi$ and $\varrho_{c^\perp}(x)=\langle x c^\perp\psi,c^\perp\psi\rangle_\pi$, for $x\in \pi(M)^{\,\prime\prime} c^\perp$, respectively. Having in mind this, and making appeal to the $vN$-algebras $M_c=\pi(M)^{\,\prime\prime} c$ and $M_{c^\perp}=\pi(M)^{\,\prime\prime} c^\perp$, then the relation \eqref{vN4a} conveniently can be rewritten as
\begin{equation}\label{vN4b}
{F(M|\nu,\varrho)}={F(M_c|\nu_c,\varrho_c)}+{F(M_{c^\perp}|\nu_{c^\perp},\varrho_{c^\perp})}\,.
\end{equation}
\begin{definition}[\cite{AlPe:00.2}]\label{twice1}
Let $M$ be a unital ${\mathsf C}^*$-algebra,  $\{\pi,{\mathcal H}_\pi\}$ a unital $^*$-representation such that the $\pi$-fibres of $\nu,\varrho\in {\mathcal S}(M)$  are non-trivial. Define $
    {\mathcal S}_{\pi,M}(\nu|\varrho)\subset {\mathcal S}_{\pi,M}(\nu)$ as follows:
${\mathcal S}_{\pi,M}(\nu|\varrho)=\bigl\{\varphi\in {\mathcal S}_{\pi,M}(\nu): \exists\, \psi\in {\mathcal S}_{\pi,M}(\varrho),\,F(M|\nu,\varrho)=\langle\psi,\varphi\rangle_\pi\bigr\}$.
\end{definition}
\noindent
The remarkable fact with this set of implementing vectors is that it is always non-trivial.
\begin{lemma}\label{nv}
For  $\nu,\varrho\in {\mathcal S}(M)$ with  nontrivial $\pi$-fibres ${\mathcal S}_{\pi,M}(\nu|\varrho)\not=\emptyset$ is fulfilled.
\end{lemma}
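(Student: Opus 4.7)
The plan is to recast the attainability problem in Definition~\ref{twice1} as a question about realising the predual norm $\|h^\pi_{\psi,\varphi}\|_1=F(M|\nu,\varrho)$ on a specific element of the unit ball of $\pi(M)'$, and to produce such an element via Sakai's polar decomposition of this normal functional on the commutant $vN$-algebra.

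First I would fix $\varphi\in{\mathcal S}_{\pi,M}(\nu)$ and $\psi\in{\mathcal S}_{\pi,M}(\varrho)$, which exist by hypothesis. By Lemma~\ref{bas3}\,\eqref{bas3aa}, $F(M|\nu,\varrho)=\|h^\pi_{\psi,\varphi}\|_1$, where $h^\pi_{\psi,\varphi}$ is the $\sigma$-weakly continuous linear form on $\pi(M)'$ given by \eqref{hform}. I would then apply Sakai's polar decomposition theorem to the normal functional $h^\pi_{\psi,\varphi}\in\pi(M)'_*$, obtaining a partial isometry $u\in\pi(M)'$ and a positive normal functional $|h^\pi_{\psi,\varphi}|\in(\pi(M)'_*)_+$ such that $h^\pi_{\psi,\varphi}(x)=|h^\pi_{\psi,\varphi}|(xu)$ for every $x\in\pi(M)'$, with support condition $u^*u=s(|h^\pi_{\psi,\varphi}|)$ and norm identity $\|h^\pi_{\psi,\varphi}\|_1=|h^\pi_{\psi,\varphi}|(\mathsf{1})=h^\pi_{\psi,\varphi}(u^*)$. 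Read as an inner product this gives $\langle u^*\psi,\varphi\rangle_\pi=F(M|\nu,\varrho)$, so $\psi_\star:=u^*\psi\in{\mathcal H}_\pi$ is the natural candidate for the implementer of $\varrho$ appearing in Definition~\ref{twice1}.

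It remains to verify $\psi_\star\in{\mathcal S}_{\pi,M}(\varrho)$, and this is where the main obstacle lies. By \eqref{bas4} this membership is equivalent to finding a partial isometry $w\in\pi(M)'$ with $w^*w=p_\pi^{\,\prime}(\psi)$ and $w\psi=\psi_\star$. The natural choice $w:=u^*p_\pi^{\,\prime}(\psi)$ satisfies $w\psi=u^*\psi=\psi_\star$ (because $p_\pi^{\,\prime}(\psi)\psi=\psi$), and $w^*w=p_\pi^{\,\prime}(\psi)(uu^*)p_\pi^{\,\prime}(\psi)$; the latter equals $p_\pi^{\,\prime}(\psi)$ exactly when the support-dominance $uu^*\geq p_\pi^{\,\prime}(\psi)$ holds. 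To establish this dominance I would invoke the Sakai duality identity $uu^*=s(|h^{\pi,*}_{\psi,\varphi}|)$ for the adjoint functional $h^{\pi,*}_{\psi,\varphi}(x)=\langle x\varphi,\psi\rangle_\pi$. The minimality definition of this support then characterises $uu^*$ as the smallest projection $p\in\pi(M)'$ such that $(\mathsf{1}-p)\,\pi(M)'\,\varphi\perp\psi$ in ${\mathcal H}_\pi$; cyclicity of $\pi(M)\psi$ in $p_\pi^{\,\prime}(\psi){\mathcal H}_\pi$ together with density of $\pi(M)'\varphi$ in $p_\pi(\varphi){\mathcal H}_\pi$ then forces $uu^*$ to dominate $p_\pi^{\,\prime}(\psi)$. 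With that established, $\psi_\star=w\psi\in{\mathcal S}_{\pi,M}(\varrho)$ and $\langle\psi_\star,\varphi\rangle_\pi=F(M|\nu,\varrho)$, hence $\varphi\in{\mathcal S}_{\pi,M}(\nu\,|\,\varrho)$, completing the proof. Should this direct support argument resist, an alternative line of attack would reduce to the $vN$-algebra setting via Lemma~\ref{eqdist}\,\eqref{eqdist2} and Remark~\ref{rem1}\,\eqref{rem11}, then construct the optimizing implementer within the standard form of $\pi(M)^{\,\prime\prime}$, where Araki-type arguments furnish canonical vectors in the natural positive cone achieving the fidelity.
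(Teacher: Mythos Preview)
Your direct approach fails at the crucial step: the inequality you claim, $uu^*\geq p^{\,\prime}_\pi(\psi)$, goes the wrong way. With $u=v^\pi_{\psi,\varphi}$ the partial isometry from the polar decomposition \eqref{sub0}, the paper's equation \eqref{sub2} gives precisely $uu^*=s(|h^{\pi*}_{\psi,\varphi}|)\leq p^{\,\prime}_\pi(\psi)$, and this inequality can be strict. Your support characterisation of $uu^*$ is also not the right one: what the support condition actually says is that $uu^*$ is the smallest projection $p\in\pi(M)'$ with $|h^{\pi*}_{\psi,\varphi}|({\mathsf 1}-p)=0$, and nothing about the ranges of $\pi(M)'\varphi$ or $\pi(M)\psi$ forces this to be all of $p^{\,\prime}_\pi(\psi)$. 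When $uu^*<p^{\,\prime}_\pi(\psi)$ strictly, minimality of $p^{\,\prime}_\pi(\psi)$ among commutant projections fixing $\psi$ gives $uu^*\psi\neq\psi$, so $\|\psi_\star\|=\|u^*\psi\|<1$ and your candidate is not even a unit vector, let alone an implementer of $\varrho$. (The extreme case $h^\pi_{\psi,\varphi}=0$, i.e.\ $\nu\perp\varrho$, already shows this: then $u=0$ and $\psi_\star=0$.)

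The paper's proof confronts exactly this obstruction. It keeps the polar partial isometry $v^\pi_{\psi,\varphi}$ as the core but then \emph{enlarges} it to partial isometries $v,w\in\pi(M)'$ with $v^*v\geq p^{\,\prime}_\pi(\varphi)$, $w^*w\geq p^{\,\prime}_\pi(\psi)$ and $v^*w$ still attaining the norm, see \eqref{zwie}. This enlargement is not free: it requires comparison theory in $\pi(M)'$, and the paper splits into the cases where $\pi(M)'$ is finite (equivalent projections have equivalent complements, so $v^\pi_{\psi,\varphi}$ extends to a unitary) and properly infinite (halving gives room to absorb $p^{\,\prime}_\pi(\varphi)$ and $p^{\,\prime}_\pi(\psi)$ into orthogonal infinite pieces). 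Your fallback via standard form and the natural cone is essentially the content of the paper's subsequent Lemma~\ref{prinz}, which is a refinement rather than a route to Lemma~\ref{nv} in an arbitrary representation; to get back to the given $\pi$ from a standard-form representation still needs an argument comparable in strength to what the paper does directly.
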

\begin{proof}
For the sake of later references this fact will be proved here, using a modified variant of the line of proving given in \cite[Appendix 7]{Albe:92.1}.
Assume $\varphi\in {\mathcal S}_{\pi,M}(\nu)$, $\psi\in {\mathcal S}_{\pi,M}(\varrho)$, and consider
$h_{\psi,\varphi}^\pi$ as defined in \eqref{hform}. Recall first some useful facts about this linear form.
Let the polar decomposition of $h_{\psi,\varphi}^\pi$ be given by
\begin{equation}\label{sub0}
h_{\psi,\varphi}^\pi=
|h_{\psi,\varphi}^\pi|\bigl((\cdot)v_{\psi,\varphi
}^\pi\bigr)
\end{equation}
By the polar decomposition
theorem for normal linear forms in $vN$-algebras, a partial isometry
$v=v_{\psi,\varphi}^\pi$ in and a normal positive linear form
$g=|h_{\psi,\varphi}^\pi|$ over $\pi(M)^{\,\prime}$ and obeying
$h_{\psi,\varphi}^\pi=g((\cdot)v)$ exist, and both
are unique subject to $v^*v=s(g)$, with the support  orthoprojection  $s(g)$
of $g$. Thus especially
\begin{equation}\label{sub0.5}
|h_{\psi,\varphi}^\pi|=h_{\psi,\varphi}^\pi\bigl((\cdot)v_{\psi,\varphi}^{\pi*}\bigr)=
\bigl\langle(\cdot)v_{\psi,\varphi}^{\pi*}\psi,\varphi\bigr
\rangle_\pi\,,
\end{equation}
from which in view of the above the following can be obtained:
\begin{equation}\label{sub1}
v_{\psi,\varphi}^{\pi*}v_{\psi,\varphi}^\pi=
s\bigl(|h_{\psi,\varphi}^\pi|\bigr)\leq p_\pi^{\,\prime}(\varphi)\,.
\end{equation}
Analogously, by polar decomposition of $h_{\psi,\varphi}^{\pi*}=h_{\varphi,\psi}^\pi$,
one has $v_{\varphi,\psi}^\pi=v_{\psi,\varphi}^{\pi*}$ and
\begin{equation}\label{sub1.5}
 |h_{\psi,\varphi}^{\pi*}|=h_{\varphi,\psi}^\pi\bigl((\cdot)v_{\psi,\varphi}^{\pi}\bigr)=
\bigl\langle(\cdot)v_{\psi,\varphi}^{\pi}\varphi,\psi\bigr
\rangle_\pi\,,
\end{equation}
from which analogously
\begin{equation}\label{sub2}
v_{\psi,\varphi}^\pi v_{\psi,\varphi}^{\pi*}=
s\bigl(|h_{\psi,\varphi}^{\pi*}|\bigr)\leq p_\pi^{\,\prime}(\psi)
\end{equation}
is obtained. In particular, with the Murray-von Neumann equivalence `$\sim$' one has
\begin{equation}\label{sub3}
s\bigl(|h_{\psi,\varphi}^\pi|\bigr)\sim
s\bigl(|h_{\psi,\varphi}^{\pi*}|\bigr)\,.
\end{equation}
Now, as a consequence of \eqref{sub0} and \eqref{sub0.5},
$\|h_{\psi,\varphi}^\pi\|_1=\||h_{\psi,\varphi}^\pi|\|_1$ holds. Hence, in the notations from above, since $|h_{\psi,\varphi}^\pi|$ is positive one concludes that
\begin{equation}\label{klar}
   \|h_{\psi,\varphi}^\pi\|_1=\||h_{\psi,\varphi}^\pi|\|_1=|h_{\psi,\varphi}^\pi|({\mathsf 1})=\bigl\langle v_{\psi,\varphi}^{\pi*}\psi,\varphi\bigr
\rangle_\pi=h_{\psi,\varphi}^\pi\bigl(v_{\psi,\varphi}^{\pi*}\bigr)\,.
\end{equation}
Remark that for a proof of the assertion it is  sufficient to find partial
isometries $v,w\in \pi(M)^{\,\prime}$ obeying
$v^*v\geq p_\pi^{\,\prime}(\varphi)$, $w^*w\geq p_\pi^{\,\prime}(\psi)$ and $h_{\psi,\varphi}^\pi\bigl(v_{\psi,\varphi}^{\pi*}\bigr)=
h_{\psi,\varphi}^\pi(v^*w)$. In fact, according to \eqref{bas4}, for
$\psi_0=w\psi$ and $\varphi_0=v\varphi$
one has $\psi_0\in {\mathcal S}_{\pi,M}(\varrho)$ and $\varphi_0\in {\mathcal S}_{\pi,M}(\nu)$.
By \eqref{klar}, in applying Lemma \ref{bas3} twice, we may conclude as follows\,:
\begin{equation}\label{zwie}
\begin{split}
\langle\psi_0,\varphi_0\rangle_\pi=h_{\psi_0,\varphi_0}^\pi({\mathsf 1})  =h_{\psi,\varphi}^\pi(v^*w) & =h_{\psi,\varphi}^\pi\bigl(v_{\psi,\varphi}^{\pi*}\bigr)\\
& =
\|h_{\psi,\varphi}^\pi\|_1={F(M|\nu,\varrho)}=\|h_{\psi_0,\varphi_0}^\pi\|_1
\end{split}
\end{equation}
Hence, $\varphi_0\in {\mathcal S}_{\pi,M}(\nu|\varrho)$, and which then will prove the assertion.

Recall that for a $vN$-algebra $N$ on a Hilbert space ${\mathcal H}_\pi$
there is a largest central orthoprojection
$z\in N\cap N^{\,\prime}$ such that
the $vN$-algebra $(N^{\,\prime})z$ over $z{\mathcal H}_\pi$ is finite.
Thus, in case of $z\not={\mathsf 1}$,
$(N^{\,\prime}){z^\perp}$ is properly infinite over $z^\perp{\mathcal H}_\pi$.
Also, for a central orthoprojection $c$,
$$(N^{\,\prime})c=\bigl(N\,c\bigr)^{\,\prime}$$
holds with respect to the Hilbert space $c{\mathcal H}_\pi$ (the outer commutant
refers to $c{\mathcal H}_\pi$).
Applying this to $N=\pi(M)^{\,\prime}$ with $c=z$ or $c=z^\perp$, and having in mind
Remark \ref{rem1}\,\eqref{rem11}, Remark \ref{vN}\,\eqref{vN3} and relation \eqref{vN4b} we see that we can content ourselves  with constructing the partial
isometries in question only for
$vN$-algebras $M_c$ and $M_{c^\perp}$ in  \eqref{vN4b} with finite or  properly infinite commutant, respectively.
In fact, by the above and in making appeal to the simplified notations according to Remark \ref{rem1}\,\eqref{rem12} and the notions used in context of \eqref{vN4b} we then could be assured that $\varphi_c\in {\mathcal S}_{M_c}(\nu_c)$ and $\psi_c\in {\mathcal S}_{M_c}(\varrho_c)$ with
${F(M_c|\nu_c,\varrho_c)}=\langle \psi_c, \varphi_c\rangle_\pi$, and $\varphi_{c^\perp}\in {\mathcal S}_{M_{c^\perp}}(\nu_{c^\perp})$ and $\psi_{c^\perp}\in {\mathcal S}_{M_{c^\perp}}(\varrho_{c^\perp})$ with
${F(M_{c^\perp}|\nu_{c^\perp},\varrho_{c^\perp})}=\langle \psi_{c^\perp}, \varphi_{c^\perp}\rangle_\pi$ existed. By Remark \ref{vN}\,\eqref{vN3}, in defining vectors $\varphi_0=\varphi_c+\varphi_{c^\perp}$ and $\psi_0=\psi_c+\psi_{c^\perp}$ will yield implementing vectors of the $\pi$-fibres of $\nu$ and $\varrho$, respectively, and which according to \eqref{vN4b} then will obey
${F(M|\nu,\varrho)}=\langle \psi_c, \varphi_c\rangle_\pi+\langle  \psi_{c^\perp},\varphi_{c^\perp}\rangle_\pi=\langle\psi_0,\varphi_0\rangle_\pi$.

In line with this, for the rest of the proof let $M$ be a $vN$-algebra acting on the Hilbert space ${\mathcal H}$, with $M^{\,\prime}$ either finite or
properly infinite, and let $\nu, \varrho$ be positive linear forms on  $M$ implemented by vectors $\varphi, \psi$ in ${\mathcal H}$.
In adhering to the notational conventions of
Remark \ref{rem1}\,\eqref{rem12}, we are going to construct $v$ and $w$ under these premises now.

Suppose $M^{\,\prime}$ to be finite.
By \eqref{sub3} we have
$s(|h_{\psi,\varphi}^*|)\sim
s(|h_{\psi,\varphi}|)$.
In a finite $vN$-algebra, by another standard fact, see \cite[2.4.2.]{Saka:71}, this condition implies
$$
s(|h_{\psi,\varphi}^*|)^\perp
\sim s(|h_{\psi,\varphi}|)^\perp
$$
Hence, there is $m\in M^{\,\prime}$ with $m^*m=s(|h_{\psi,\varphi}|)^\perp$
and  $m m^*=s(|h_{\psi,\varphi}^*|)^\perp$. Define $w=v_{\psi,\varphi}^*+m^*$.
Then, $w$ is unitary, $w\in {\mathcal U}(M^{\,\prime})$, with
$w s(|h_{\psi,\varphi}^*|)=v_{\psi,\varphi}^*$. Since by
polar decomposition $h_{\psi,\varphi}(w)=h_{\psi,\varphi}(v_{\psi,\varphi}^*)$
is fulfilled, $v={\mathsf 1}$ can be chosen, then.

If $M^{\,\prime}$ is properly infinite, then there exists
$p\in M^{\,\prime}$ with $p\sim p^\perp\sim {\mathsf 1}$.
By \eqref{sub1},
$p^{\,\prime}(\varphi)-s(|h_{\psi,\varphi}|)\in M^{\,\prime}$ is an
orthoprojection. Hence
$p^{\,\prime}(\varphi)-s(|h_{\psi,\varphi}|)
\prec p$ and $p'(\psi)\prec p^\perp$ (`$\prec$' be the
Murray-von Neumann comparability relation). Hence,
there are
$v_1,v_2\in M^{\,\prime}$ with
$v_1^*v_1=p^{\,\prime}(\varphi)-s(|h_{\psi,\varphi}|)$,
$v_1v_1^*\leq p$, $v_2^*v_2=p'(\psi)$, $v_2 v_2^*\leq p^\perp$. Define
$w=v_2$, $v=v_1+v_2 v_{\psi,\varphi}$. Since $v_1^*v_2=v_2^*v_1={\mathsf 0}$ holds, the
above and \eqref{sub2} and \eqref{sub1} imply
$$v^*v=v_1^*v_1+v_{\psi,\varphi}^*v_2^*v_2v_{\psi,\varphi}=
p'(\varphi)-s(|h_{\psi,\varphi}|)+
s(|h_{\psi,\varphi}|)=p'(\varphi)$$ and $w^*w=p'(\psi)$.
By \eqref{sub2} from this $v^*w=
v_{\psi,\varphi}^*v_2^*v_2=v_{\psi,\varphi}^*p'(\psi)=
v_{\psi,\varphi}^*s(|h_{\psi,\varphi}^*|)=
v_{\psi,\varphi}^*$ is obtained.
\end{proof}
\begin{theorem}\label{positiv1}
Let $M$ be a unital ${\mathsf C}^*$-algebra,  $\{\pi,{\mathcal H}_\pi\}$ a unital $^*$-representation of $M$ on a Hilbert space ${\mathcal H}_\pi$, with the $\pi$-fibres of
$\nu,\varrho\in {\mathcal S}(M)$ both existing. There are $\varphi_0\in {\mathcal S}_{\pi,M}(\nu)$,
$\psi_0\in {\mathcal S}_{\pi,M}(\varrho)$ with $F(M|\nu,\varrho)=\langle \psi_0,\varphi_0\rangle_\pi$ and  $d_B(M|\nu,\varrho)=\|\psi_0-\varphi_0\|_\pi$, respectively.
\end{theorem}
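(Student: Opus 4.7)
The theorem is essentially a corollary of Lemma \ref{nv}. My plan is to observe that once fidelity-implementing vectors exist, they automatically implement the Bures distance as well, because the $\pi$-fibre vectors are unit vectors and the fidelity is a nonnegative real number.

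Concretely, I would first apply Lemma \ref{nv} to obtain $\varphi_0 \in {\mathcal S}_{\pi,M}(\nu|\varrho)$; by Definition \ref{twice1} this comes with a companion vector $\psi_0 \in {\mathcal S}_{\pi,M}(\varrho)$ satisfying
\[
F(M|\nu,\varrho)=\langle\psi_0,\varphi_0\rangle_\pi.
\]
This already yields the first equality asserted in the theorem. For the second equality, I would use that fibre vectors of states are unit vectors (Remark \ref{rem1}\,\eqref{vN2}), so $\|\varphi_0\|_\pi=\|\psi_0\|_\pi=1$, and that the number $\langle\psi_0,\varphi_0\rangle_\pi=F(M|\nu,\varrho)$ is real and nonnegative. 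The elementary identity
\[
\|\psi_0-\varphi_0\|_\pi^{\,2}=\|\psi_0\|_\pi^{\,2}+\|\varphi_0\|_\pi^{\,2}-2\,\Re\langle\psi_0,\varphi_0\rangle_\pi=2-2\,F(M|\nu,\varrho)
\]
then combined with \eqref{pcont.1} gives $\|\psi_0-\varphi_0\|_\pi=d_B(M|\nu,\varrho)$, as required.

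Equivalently, one could appeal directly to Theorem \ref{bas5}\,\eqref{bas53}: from $h_{\psi_0,\varphi_0}^\pi({\mathsf 1})=\langle\psi_0,\varphi_0\rangle_\pi=F(M|\nu,\varrho)=\|h_{\psi_0,\varphi_0}^\pi\|_1$ (the last equality being Lemma \ref{bas3}\,\eqref{bas3aa}) one reads off that $h_{\psi_0,\varphi_0}^\pi\geq{\mathsf 0}$, and then Theorem \ref{bas5}\,\eqref{bas53} forces $d_B(M|\nu,\varrho)=\|\psi_0-\varphi_0\|_\pi$.

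There is no real obstacle at this stage, since the technical content (the non-emptiness of ${\mathcal S}_{\pi,M}(\nu|\varrho)$, which rests on the polar decomposition of $h_{\psi,\varphi}^\pi$ and the construction of suitable partial isometries in the cases of finite and properly infinite commutant $\pi(M)^{\,\prime}$) has already been carried out in the proof of Lemma \ref{nv}. The theorem above is merely the packaging of that lemma together with the observation that the inner product attaining the fidelity is automatically real and nonnegative, so realization of the fidelity forces simultaneous realization of the Bures distance.
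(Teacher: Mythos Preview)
Your proposal is correct and follows essentially the same approach as the paper's proof: both invoke Lemma \ref{nv} to obtain the pair $\varphi_0,\psi_0$ realizing the fidelity, observe that $h_{\psi_0,\varphi_0}^\pi({\mathsf 1})=\|h_{\psi_0,\varphi_0}^\pi\|_1$ (equivalently $h_{\psi_0,\varphi_0}^\pi\geq{\mathsf 0}$), and then use Lemma \ref{bas3}\,\eqref{bas3aa} together with Theorem \ref{bas5}\,\eqref{bas53} (or, as in your first variant, the direct computation via \eqref{pcont.1}) to obtain the Bures distance equality. The paper points back to equation \eqref{zwie} inside the proof of Lemma \ref{nv}, whereas you invoke the statement of Lemma \ref{nv} via Definition \ref{twice1}; this is only a cosmetic difference.
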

\begin{proof}
According to the proof of Lemma \ref{nv}, see \eqref{zwie}, we have  $\psi_0\in {\mathcal S}_{\pi,M}(\varrho)$ and $\varphi_0\in {\mathcal S}_{\pi,M}(\nu)$ obeying
 $h_{\psi_0,\varphi_0}^\pi({\mathsf 1})=
\|h_{\psi_0,\varphi_0}^\pi\|_1$, from which
$h_{\psi_0,\varphi_0}^\pi\geq {\mathsf 0}$ follows. In view of Lemma \ref{bas3}\,\eqref{bas3aa} and
Theorem \ref{bas5}\,\eqref{bas53} the assertion then follows.
\end{proof}
Relating the structure of the set defined in Definition \ref{twice1} the following result subsequently will be of importance.
\begin{lemma}\label{king}
Let $M$ be a unital ${\mathsf C}^*$-algebra, and  $\omega,\, \nu,\, \varrho\in {\mathcal S}(M)$ three given states. There exists a unital $^*$-representation $\{\pi,{\mathcal H_\pi}\}$ such that ${\mathcal S}_{\pi,M}(\omega|\nu)\cap{\mathcal S}_{\pi,M}(\omega|\varrho)\not=\emptyset$
\end{lemma}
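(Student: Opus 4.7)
The plan is to apply Theorem \ref{positiv1} separately to $(\omega, \nu)$ and $(\omega, \varrho)$ and then to align the two resulting $\omega$-implementing vectors by a unitary in the commutant. To guarantee that such a unitary always exists it is convenient to work in a representation for which $\pi(M)^{\,\prime}$ is properly infinite, which can be arranged by amplification.

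Concretely, I would first pick any unital $^*$-representation $\{\pi_0, \mathcal{H}_0\}$ in which all three $\pi_0$-fibres are non-trivial (e.g.\ the GNS representation of $(\omega+\nu+\varrho)/3$, as in Section \ref{bas1}), and then pass to $\pi = \pi_0 \otimes 1_{\mathcal{K}}$ on $\mathcal{H}_\pi = \mathcal{H}_0 \otimes \mathcal{K}$ with $\mathcal{K}$ an infinite-dimensional Hilbert space. The $\pi$-fibres remain non-empty, since $\chi \otimes \xi \in \mathcal{S}_{\pi,M}(\mu)$ for any $\chi \in \mathcal{S}_{\pi_0,M}(\mu)$ and unit vector $\xi \in \mathcal{K}$, while the commutant $\pi(M)^{\,\prime} = \pi_0(M)^{\,\prime} \bar{\otimes} B(\mathcal{K})$ is properly infinite, because it contains the pair of Murray-von Neumann equivalent projections $1 \otimes 1$ and $1 \otimes p$ for $p$ any proper infinite-rank projection on $\mathcal{K}$.

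Next, apply Theorem \ref{positiv1} within $\pi$ twice to produce $\varphi_\nu \in \mathcal{S}_{\pi,M}(\omega|\nu)$ and $\varphi_\varrho \in \mathcal{S}_{\pi,M}(\omega|\varrho)$, with witnessing vectors $\psi_\nu \in \mathcal{S}_{\pi,M}(\nu)$ and $\psi_\varrho \in \mathcal{S}_{\pi,M}(\varrho)$. By \eqref{bas4} there is a partial isometry $v \in \pi(M)^{\,\prime}$ with $v^*v = p_\pi^{\,\prime}(\varphi_\nu)$ and $v\varphi_\nu = \varphi_\varrho$. Proper infiniteness of $\pi(M)^{\,\prime}$ ensures that the MvN equivalent projections $v^*v$ and $vv^*$ also have MvN equivalent complements, so $v$ extends to a unitary $u \in \pi(M)^{\,\prime}$ satisfying $u\varphi_\nu = \varphi_\varrho$. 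Setting $\tilde{\psi}_\varrho := u^*\psi_\varrho$, one has $\tilde{\psi}_\varrho \in \mathcal{S}_{\pi,M}(\varrho)$ by \eqref{bas4} applied at $\psi_\varrho$, and
\[
\langle \tilde{\psi}_\varrho, \varphi_\nu \rangle_\pi = \langle \psi_\varrho, u\varphi_\nu \rangle_\pi = \langle \psi_\varrho, \varphi_\varrho \rangle_\pi = F(M|\omega, \varrho),
\]
so $\varphi_\nu \in \mathcal{S}_{\pi,M}(\omega|\varrho)$ as well, and therefore belongs to the desired intersection.

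The essential obstacle is the commutant rotation step: in an arbitrary representation the partial isometry between two $\omega$-implementers need not extend to a unitary in $\pi(M)^{\,\prime}$, because the complements $p_\pi^{\,\prime}(\varphi_\nu)^\perp$ and $(vv^*)^\perp$ may fail to be MvN equivalent. The amplification circumvents this in the same spirit as the properly-infinite branch in the proof of Lemma \ref{nv}; alternatively one could split along the finite/properly-infinite decomposition of $\pi(M)^{\,\prime}$ and appeal on the finite part to the standard fact that $p \sim q$ implies $p^\perp \sim q^\perp$, but the amplification is cleaner because it reduces everything to one uniform case.
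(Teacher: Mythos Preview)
Your approach via amplification and unitary rotation in the commutant is natural and can be made to work, but there is a gap at the key step. You assert that proper infiniteness of $\pi(M)^{\,\prime}$ ensures that MvN equivalent projections have MvN equivalent complements. This is false: in ${\mathsf B}({\mathcal H})$ with ${\mathcal H}$ infinite-dimensional, the identity ${\mathsf 1}$ is equivalent to any projection $q$ of infinite rank, yet ${\mathsf 1}^\perp = {\mathsf 0} \not\sim q^\perp$ whenever $q \neq {\mathsf 1}$. The implication $p \sim q \Rightarrow p^\perp \sim q^\perp$ is precisely a \emph{finiteness} phenomenon (cf.\ \cite[2.4.2]{Saka:71}), not a consequence of proper infiniteness; this is exactly why the proof of Lemma~\ref{nv} treats the two cases separately and uses a genuinely different construction in the properly infinite part. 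Your suggested ``alternative'' of splitting along the finite/properly-infinite decomposition runs into the same obstacle on the infinite summand.

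The gap is easily repaired. Instead of invoking Theorem~\ref{positiv1} in the amplified representation $\pi$, invoke it already in $\pi_0$ to obtain $\varphi_\nu^0, \varphi_\varrho^0 \in {\mathcal S}_{\pi_0,M}(\omega)$ together with their witnesses $\psi_\nu^0, \psi_\varrho^0$, and only then pass to the simple tensors $\varphi_\nu = \varphi_\nu^0 \otimes e_0$, $\varphi_\varrho = \varphi_\varrho^0 \otimes e_0$, $\psi_\nu = \psi_\nu^0 \otimes e_0$, $\psi_\varrho = \psi_\varrho^0 \otimes e_0$ in $\pi$. These inherit the optimality relations, and now $p_\pi^{\,\prime}(\varphi_\nu) = p_{\pi_0}^{\,\prime}(\varphi_\nu^0) \otimes q_0$ with $q_0$ the rank-one projection onto ${\mathbb C}e_0$. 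Its complement dominates ${\mathsf 1} \otimes q_0^\perp \sim {\mathsf 1}\otimes{\mathsf 1}$ in $\pi_0(M)^{\,\prime}\,\bar\otimes\,{\mathsf B}({\mathcal K})$, and likewise for $\varphi_\varrho$; hence both complements are equivalent to ${\mathsf 1}$, the partial isometry $v$ extends to a unitary $u$, and your remaining argument goes through verbatim.

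For comparison, the paper takes a different route: it reduces to a standard-form $vN$-algebra $R$ and uses the Araki-type result Lemma~\ref{prinz} that the natural-cone implementer $\xi_\omega$ lies in ${\mathcal S}_R(\omega|\mu)$ for \emph{every} normal state $\mu$, which directly produces a single vector working simultaneously for $\nu$ and $\varrho$. Your corrected argument is more elementary in that it avoids modular theory altogether, at the price of the amplification detour; the paper's argument is more conceptual and exhibits a canonical simultaneous implementer.
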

The proof will be traced back to
a special situation of Theorem \ref{positiv1} and arising if
 $M=R$ is a $vN$-algebra
 acting in standard form on a separable
 Hilbert space ${\mathcal H}$, with cyclic and
 separating vector ${\mathit \Omega}\in {\mathcal H}$, and if the normal state space ${\mathcal S}_0(R)$ is considered. Then, each normal state is a vector state  on ${\mathcal H}$. Thus, the fibre ${\mathcal S}_R({\nu})$
 with respect to the identity representation $\pi={\mathsf{id}}$ is non--void, for each $\nu\in {\mathcal S}_0(R)$. Let $R_{*+}$ be the positive portion of the predual $R_*$ of $R$.
 It is known from modular theory \cite{Arak:74,Conn:74,Haage:75} that there exists a map
 $$
 {R_{*+}} \ni {\nu} \mapsto {\xi_{\nu}} \in
  {{\mathcal P}_{{\mathit{\Omega}}}^{\natural}},
  $$
 into the
       self-dual, natural positive cone   ${\mathcal P}_{{\mathit{\Omega}}}^{\natural}$ in ${\mathcal H}$ associated
       with the pair $\{ R, {\mathit{\Omega}} \}$,
which  is onto and through which
   a homeomorphism is established when both the cones are considered with
  their respective uniform topologies. Thereby, $\xi_{\nu}\in  {\mathcal S}_R({\nu})$ is the unique implementing vector of $\nu$ within the natural positive cone.
  Relating the structure of the set ${\mathcal S}_R(\nu|\varrho)$ defined in Definition \ref{twice1}, in the special context at hand  for any two normal states $\nu, \varrho\in {\mathcal S}_0(R)$ one has the following version of Lemma \ref{nv} proved first in  \cite{Arak:72}:
 \begin{lemma}\label{prinz}
 Let $\omega\in {\mathcal S}_0(R)$ be arbitrarily chosen. For  each normal state $\varrho$ one has
 \begin{equation}\label{stfvNa}
  \xi_\omega\in {\mathcal S}_R(\omega|\varrho)
  \end{equation}
  Thus, there exists $\psi_0\in {\mathcal S}_R(\varrho)$ such that
   $\langle\psi_0,\xi_\omega\rangle=F(R|\omega,\varrho)$
   can be fulfilled.
 \end{lemma}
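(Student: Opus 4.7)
The plan is to adapt the polar decomposition strategy used in the proof of Lemma \ref{nv}, now fixing the implementer of $\omega$ to be the natural cone vector $\xi_\omega$ and perturbing only the side of $\varrho$. Concretely, I want to produce a partial isometry $\tilde u\in R'$ such that $\psi_0:=\tilde u\,\xi_\omega$---no wait, $\tilde u\,\xi_\varrho$---lies in $\mathcal S_R(\varrho)$ and satisfies $\langle\psi_0,\xi_\omega\rangle=F(R|\omega,\varrho)$.

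The starting point is to form the linear form $h:=h_{\xi_\varrho,\xi_\omega}$ on $R'$ as in \eqref{hform} and take its polar decomposition \eqref{sub0}, yielding a partial isometry $v\in R'$ with $v^*v=s(|h|)\leq p'(\xi_\omega)$, $vv^*=s(|h^*|)\leq p'(\xi_\varrho)$, and $h(v^*)=|h|(1)=\|h\|_1=F(R|\omega,\varrho)$ by \eqref{klar} together with Lemma \ref{bas3}\eqref{bas3aa}. Next one looks for a partial isometry $m\in R'$ with $m^*m=p'(\xi_\varrho)-s(|h^*|)$ and $mm^*\leq s(|h|)^\perp$. A routine check of the support orthogonalities shows that $\tilde u:=v^*+m$ is then itself a partial isometry in $R'$, with $\tilde u^*\tilde u=vv^*+m^*m=p'(\xi_\varrho)$, so that $\tilde u\,\xi_\varrho\in\mathcal S_R(\varrho)$ by \eqref{bas4}. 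The same support considerations give $mv=0$, whence $h(m)=|h|(mv)=0$ and hence $h(\tilde u)=h(v^*)+h(m)=F$.

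Existence of the auxiliary $m$ reduces to the Murray--von\,Neumann sub-equivalence $p'(\xi_\varrho)-s(|h^*|)\preceq s(|h|)^\perp$ inside $R'$, and it is precisely here that the dichotomy of Lemma \ref{nv}'s proof is reused. Splitting $R'$ along its central projection $z$ into a finite part $R'_z$ and a properly infinite part $R'_{z^\perp}$: on $R'_z$, the equivalence $s(|h|)\sim s(|h^*|)$ supplied by $v$ forces $s(|h|)^\perp\sim s(|h^*|)^\perp$ by \cite[2.4.2]{Saka:71}, and the inequality $p'(\xi_\varrho)-s(|h^*|)\leq s(|h^*|)^\perp$ then yields the sub-equivalence; on $R'_{z^\perp}$, one employs a halving projection with $p\sim p^\perp\sim\mathsf 1$ to construct $m$ in the same fashion as in Lemma \ref{nv}.

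Once $m$ is in hand, setting $\psi_0:=\tilde u\,\xi_\varrho$ completes the argument, for then $\psi_0\in\mathcal S_R(\varrho)$ and $\langle\psi_0,\xi_\omega\rangle=h(\tilde u)=F(R|\omega,\varrho)$, that is, $\xi_\omega\in\mathcal S_R(\omega|\varrho)$. The main technical obstacle is the construction of $m$ in the properly infinite part of $R'$, which requires the same halving-projection argument used in Lemma \ref{nv}; the finite part is immediate from Sakai's theorem on complements of equivalent projections in finite $vN$-algebras.
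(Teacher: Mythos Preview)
Your argument has a genuine gap in the properly infinite part of $R'$. The appeal to ``the same fashion as in Lemma~\ref{nv}'' does not go through: in Lemma~\ref{nv}'s properly infinite case the halving projection is used to build \emph{two} partial isometries $v,w$ and the optimal pair is $(v\varphi,w\psi)$, i.e.\ \emph{both} implementing vectors are moved. Here you must keep $\xi_\omega$ fixed, which forces $v=\mathsf 1$, and then the whole burden falls on the sub\-equivalence $p'(\xi_\varrho)-s(|h^*|)\preceq s(|h|)^\perp$. Nothing in Lemma~\ref{nv}'s halving argument produces that.

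More seriously, your argument never uses that $\xi_\omega$ lies in the natural cone; if it were correct it would show $\varphi\in\mathcal S_R(\omega|\varrho)$ for \emph{every} $\varphi\in\mathcal S_R(\omega)$. But the needed subequivalence can fail for such $\varphi$. Take $R=\mathsf B(\ell^2)$ on $\mathsf{H.S.}(\ell^2)$, $\varrho$ faithful, $s(\omega)=\mathsf 1-e_1$ with $\epsilon_1\in\mathcal R(\sqrt{\sigma_\varrho})$, and $\varphi=\sqrt{\sigma_\omega}\,w^*$ with $w$ a partial isometry satisfying $w^*w=s(\omega)$, $ww^*=\mathsf 1$. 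Then (via Remark~\ref{polspez}) $s(|h|)$ corresponds to $wuu^*w^*=\mathsf 1$, so $s(|h|)^\perp=0$, while $s(\varrho)-u^*u$ has rank~$1$; the subequivalence fails. Thus the missing ingredient is precisely what distinguishes the cone vector $\xi_\omega$ from a generic implementer.

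The paper's proof supplies exactly that ingredient via modular theory. It does \emph{not} try to extend $v_{\psi,\xi_\omega}^*$ by a partial isometry at all. Instead it decomposes $\varrho=\varrho_0+\varrho_1$ using $s(|h^*|)$, observes $p(\xi_{\varrho_1})\leq p(\xi_\omega)^\perp$, and then applies Tomita's relation $Jp(\xi)J=p'(\xi)$ (valid for cone vectors) to transfer this to $p'(\xi_{\varrho_1})\leq p'(\xi_\omega)^\perp\leq s(|h|)^\perp$. The optimal vector is then $\psi_0=v_{\psi,\xi_\omega}^*\psi+\xi_{\varrho_1}$, where the second summand is the \emph{natural cone} implementer of $\varrho_1$, not something of the form $m\,\xi_\varrho$. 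This use of $J$ and of the auxiliary cone vector $\xi_{\varrho_1}$ is the essential idea your proposal lacks.
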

\begin{proof} We follow along the lines of proving in \cite[Corollary 6.2, \S  7]{Albe:97.1, Albe:96.1}.
Since all states involved are vector state, we will work within the identity representation $\pi={\mathsf {id}}$ and will make use of the simplified notations as agreed in Remark \ref{rem1}\,\eqref{rem12}. As explained above,
$\xi_\omega\in {\mathcal S}_R(\omega)$ is fulfilled.  Let  $\psi\in {\mathcal S}_R(\varrho)$ chosen. Then, by
 the polar decomposition theorem, see  \eqref{sub0}--\eqref{sub2},  on the commutant $R^{\,\prime}$
 $$h_{\psi,\xi_\omega}=
{h_{\psi,\xi_\omega}}{({(\cdot)}s({|h_{\psi,\xi_\omega}^*|}))}$$
is fulfilled. Equivalently,
${h_{\psi,\xi_\omega}}{({(\cdot)}
{s({|h_{\psi,\xi_\omega}^*|})}^{\perp})}={\mathsf 0}$ holds, and for all $ x, y \in R^{\,\prime}$,
$$\bigl\langle x{s({|h_{\psi,\xi_\omega}^*|})}^{\perp}\psi,y
\xi_\omega\bigr\rangle=0$$
follows. Hence,
${p({s({|h_{\psi,\xi_\omega}^*|})}^{\perp}\psi)}
\perp p(\xi_\omega)$, which equivalently can  be expressed as
\begin{equation}\label{ortho0}
{p({s({|h_{\psi,\xi_\omega}^*|})}^{\perp}\psi)}
\leq p(\xi_\omega)^{\perp},
\end{equation}
 Let us consider the positiv linear forms $\varrho_0$, $\varrho_1$ implemented by $s({|h_{\psi,\xi_\omega}^*|})\psi$, $s({|h_{\psi,\xi_\omega}^*|})^{\perp}\psi$ over $R$, respectively, that is,
\[
s({|h_{\psi,\xi_\omega}^*|})\psi\in {\mathcal S}_R(\varrho_0),\,s({|h_{\psi,\xi_\omega}^*|})^{\perp}\psi\in {\mathcal S}_R(\varrho_1)
\]
Then, $\varrho=\varrho_0+\varrho_1$. Note that $v_{\psi,\xi_\omega}\in R^{\,\prime}$ and  ${v_{\psi,\xi_\omega}^{}}v_{\psi,\xi_\omega}^*=
s({|h_{\psi,\xi_\omega}^*|})$ imply that
\[
v_{\psi,\xi_\omega}^*\psi\in {\mathcal S}_R(\varrho_0)
\]
On the other hand, since
$v^*_{\psi,\xi_\omega}{v^{}_{\psi,\xi_\omega}}=
s({|{h_{\psi,\xi_\omega}}|}) \leq p^{\,\prime}(\xi_\omega)$ holds, we
get
\begin{equation}\label{ortho1}
{s({|{h_{\psi,\xi_\omega}}|})}^{\perp}
\geq {p^{\,\prime}(\xi_\omega)}^{\perp}.
\end{equation}
Owing to
${p({s({|h_{\psi,
\xi_\omega}^*|})}^{\perp}\psi)}=
p({\xi}_{\varrho_1})$, where  $\xi_{\varrho_1}\in {\mathcal P}_{\mathit{\Omega}}^{\natural}$ is the uniquely determined implementing vector of $\varrho_1$
within the modular cone, from (\ref{ortho0}) conclude that
\begin{equation}\label{basis}
p({\xi}_{\varrho_1})\leq p(\xi_\omega)^{\perp}
\end{equation}
holds. Let $J$ be the modular conjugation operator associated to the pair $\{R,\mathit{\Omega}\}$. Then, in using Tomita's theorem \cite{Take:70}
saying that $JRJ=R^{\,\prime}$ holds, one easily sees that, for any given vector
$\zeta \in {\mathcal H}$,
the relation $Jp(\zeta)J=p^{\,\prime}(J\zeta)$ is valid.  From this together with the fact that $J\xi=\xi$ holds,
for any vector $\xi$ belonging to the modular cone, we infer that
$$Jp(\xi_{\varrho_1})J=p^{\,\prime}(\xi_{\varrho_1}),\,Jp(\xi_{\omega})J=p^{\,\prime}(\xi_{\omega}) $$
Application to (\ref{basis})
yields
$p^{\,\prime}({\xi}_{\varrho_{1}})\leq
{p^{\,\prime}({\xi}_{\omega})}^{\perp}$, which in view of \eqref{ortho1} implies
\begin{equation}\label{basis1}
{s({|{h_{\psi,\xi_{\omega}}}|})}^{\perp}
\geq {p^{\,\prime}({\xi}_{\varrho_1})}\,.
\end{equation}
From this two things can be seen. Firstly, owing to
$v_{\psi,\xi_{\omega}}^*{v_{\psi,\xi_{\omega}}}=
s({|{h_{\psi,\xi_{\omega}}}|})$ it follows that
$\langle xv_{\psi,\xi_{\omega}}^*\psi,y{\,}{\xi}_{\varrho_1}
\rangle=0$, for all $ x,y \in R$. Hence, for any $x \in R$,
\begin{eqnarray*}
\langle x(v_{\psi,\xi_{\omega}}^*\psi+{{\xi}_{\varrho_1}}
),v_{\psi,\xi_{\omega}}^*\psi+{{\xi}_{\varrho_1}}\rangle & = &
\langle xv_{\psi,\xi_{\omega}}^*\psi,v_{\psi,\xi_{\omega}}^*\psi\rangle +
\langle x{\xi}_{\varrho_1},{\xi}_{\varrho_1}
\rangle\\
& = & {\varrho_0}(x)+{\varrho_1}(x)\\
& = & \varrho(x)
\end{eqnarray*}
This proves that $\psi_0=v_{\psi,\xi_{\omega}}^*\psi+{{\xi}_{\varrho_1}} \in {\mathcal S}_R(\varrho)$. Secondly, from \eqref{basis1}
$s({|{h_{\psi,\xi_{\omega}}}|}){\psi_0}=
v_{\psi,\xi_{\omega}}^*\psi$ follows. Finally, the following conclusions can
be carried out:
$$\langle \psi_0,\xi_{\omega}\rangle=
\langle v_{\psi,\xi_{\omega}}^*\psi,\xi_{\omega}\rangle +
\langle {\xi}_{\varrho_1},
\xi_{\omega}\rangle=\langle v_{\psi,\xi_{\omega}}^*\psi,\xi_{\omega}\rangle=
{|{h_{\psi,\xi_{\omega}}}|}({\mathsf 1})=
{\|\,{|{h_{\psi,\xi_{\omega}}}|}\,\|}_1 =
{\|{h_{\psi,\xi_{\omega}}}\|}_1$$ According to Lemma \ref{bas3}\,\eqref{bas3aa} from the latter
$\langle \psi_0,\xi_{\omega}\rangle=F(R|\omega,\varrho)$ follows.
\end{proof}

\begin{proof}[Proof of Lemma \ref{king}]
Define another state $\mu$ by
$$\mu=\frac{1}{3}(\nu+\omega+\varrho)$$
and consider  a  $^*$-representation
$\{\pi,{\mathcal H_\pi}\}$ of $M$ such that the fibres of the states $\omega,\, \nu,\, \varrho$ and $\mu$ all exist (e.g., the direct sum of the  cyclic representations of  $\omega,\, \nu,\, \varrho$ may be taken). Let $\mathit{\Omega}\in {\mathcal S}_{\pi,M}(\mu)$, and be $N=\pi(M)^{\,\prime\prime}$. Let  $\tilde{\omega},\, \tilde{\nu},\, \tilde{\varrho}$ be the uniquely determined vector states over $N$ such that $\omega(x)=\tilde{\omega}(\pi(x))$,  $\nu(x)=\tilde{\nu}(\pi(x))$,   $\varrho(x)=\tilde{\varrho}(\pi(x))$, and $\mu(x)=\tilde{\mu}(\pi(x))$, for each $x\in M$. By the definition of $\mu$, each of the states $\omega$,  $\nu$, and $\varrho$ is dominated by the positive linear form $3 \mu$. Clearly, with the help of a Kaplansky density argument and due to the normality of the vector states, the relation of domination extends to the vector states, too. Hence, by the Radon-Nikodym theorem of S.Sakai \cite{Saka:71}, in case of $\tilde{\nu}$ e.g., we have $a\in N_+$ such that $\varphi=a\mathit{\Omega}\in {\mathcal S}_{\pi,M}(\nu)$ can be chosen. Thus, the implementing vector $\varphi$ is obeying both $p_\pi(\varphi)=s(\tilde{\nu})\leq s(\tilde{\mu})=
p_\pi(\mathit{\Omega})$ and $p_\pi^{\,\prime}(\varphi)\leq p_\pi^{\,\prime}(\mathit{\Omega})$. Analogously we may argue for implementing vectors $\eta$ and $\psi$ of the states $\omega$ and $\varrho$, respectively,  and therefore can summarize that implementing vectors of $\omega$,  $\nu$, and $\varrho$ can be chosen as to obey $$\varphi,\eta,\psi\in {\mathcal H}=p_\pi^{\,\prime}(\mathit{\Omega})p_\pi(\mathit{\Omega}){\mathcal H}_\pi$$
Let us consider the $vN$-algebra $$R=p_\pi(\mathit{\Omega})Np_\pi(\mathit{\Omega})p_\pi^{\,\prime}(\mathit{\Omega})$$
acting over the Hilbert space ${\mathcal H}$. Obviously, for the action of $R$ over ${\mathcal H}$ the vector $\mathit{\Omega}$ is a cyclic and separating
vector. Let $\nu_0$, $\omega_0$ and $\varrho_0$ the vector states generated over $R$ by $\varphi$, $\eta$ and $\psi$, respectively. Then, Lemma \ref{prinz} may be applied to both pairs $\{\omega_0,\nu_0\}$ and $\{\omega_0,\varrho_0\}$ of states over the standard form $vN$-algebra $R$. In line with this, we then find $\varphi_0\in {\mathcal S}_R(\nu_0)$ and $\psi_0\in {\mathcal S}_R(\varrho_0)$ such that
\begin{equation}\label{king1}
\langle \varphi_0,\xi_{\omega_0}\rangle_\pi=F(R|\omega_0,\nu_0),\, \langle \psi_0,\xi_{\omega_0}\rangle_\pi=F(R|\omega_0,\varrho_0)
\end{equation}
both are fulfilled, with respect to the special implementing vector $ \xi_{\omega_0} \in {\mathcal P}_{\mathit{\Omega}}^{\natural}$ of $\omega_0$. According to Lemma \ref{bas3}\,\eqref{bas3aa} and Theorem \ref{bas5}\,\eqref{bas53} from this we have to conclude that
\begin{equation}\label{king2}
\langle y \varphi_0,\xi_{\omega_0}\rangle_\pi\geq 0,\,\langle y \psi_0,\xi_{\omega_0}\rangle_\pi\geq 0
\end{equation}
holds, for any $y\in R_+^{\,\prime}$. One easily infers that the following holds, for each $x\in N_+^{\,\prime}$,
\[
p_\pi^{\,\prime}(\mathit{\Omega}) x p_\pi^{\,\prime}(\mathit{\Omega})p_\pi(\mathit{\Omega})\in R_+^{\,\prime}
\]
Accordingly, \eqref{king2} then implies that
$
h^\pi_{\varphi_0,\xi_{\omega_0}}(x)\geq 0,\,h^\pi_{\psi_0,\xi_{\omega_0}}(x)\geq 0
$. Thus,
\begin{equation}\label{king3}
h^\pi_{\varphi_0,\xi_{\omega_0}}\geq {\mathsf 0},\, h^\pi_{\psi_0,\xi_{\omega_0}}\geq {\mathsf 0}
\end{equation}
both hold over $N^{\,\prime}$. Also note that
since by $\varphi, \,\varphi_0\in {\mathcal H}$ the same vector states over $R$ are implemented, by definition of $R$ obviously the same vector states over $N$ are implemented. That is, $\varphi_0\in {\mathcal S}_{\pi,M}(\nu)$ has to be fulfilled, too. Analogously we have $\psi_0\in {\mathcal S}_{\pi,M}(\varrho)$ and $\xi_{\omega_0}\in {\mathcal S}_{\pi,M}(\omega)$. The latter and \eqref{king3} by  Lemma \ref{bas3}\,\eqref{bas3aa} and Theorem \ref{bas5}\,
\eqref{bas53} imply
\[
\langle \varphi_0,\xi_{\omega_0}\rangle_\pi=F(M|\omega,\nu),\, \langle \psi_0,\xi_{\omega_0}\rangle_\pi=F(M|\omega,\varrho)
\]
and which is equivalent to the assertion of Lemma \ref{king}.
\end{proof}
\begin{corolla}\label{uppdiff}
 Let $M$ be a unital ${\mathsf C^*}$-algebra. For any $\nu,\varrho\in {\mathcal S}(M)$ one has
 \begin{equation}\label{uppdiff1}
    d_B(M|\nu,\varrho)\leq \sqrt{\|\nu-\varrho\|_1}
 \end{equation}
\end{corolla}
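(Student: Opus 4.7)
The plan is to reduce the claim to the case of a $vN$-algebra with normal states via Lemma \ref{eqdist}, and then to exploit the $vN$-algebraic characterization of $F$ supplied by Lemma \ref{bas3ccc}. Pick any unital $^*$-representation $\{\pi,{\mathcal H}_\pi\}$ of $M$ in which both $\pi$-fibres of $\nu,\varrho$ exist (for instance the GNS of $(\nu+\varrho)/2$). By Lemma \ref{eqdist}, the quantities $\|\nu-\varrho\|_1$ and $d_B(M|\nu,\varrho)$ coincide with $\|\nu_\pi-\varrho_\pi\|_1$ and $d_B(\pi(M)^{\,\prime\prime}|\nu_\pi,\varrho_\pi)$ respectively, so without loss of generality we may assume $M$ itself is a $vN$-algebra and $\nu,\varrho$ are normal.

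Under these hypotheses, Lemma \ref{bas3ccc} yields, for every $\epsilon>0$, a finite positive decomposition of the unity $\{x\}\subset M_+$ with $\sum_k\sqrt{\nu(x_k)\varrho(x_k)}<F(M|\nu,\varrho)+\epsilon$. Since $\nu(\mathsf{1})=\varrho(\mathsf{1})=1$, using \eqref{pcont.1} this can be recast as
\[
d_B(M|\nu,\varrho)^2-2\epsilon \,<\, 2-2\sum_k\sqrt{\nu(x_k)\varrho(x_k)} \,=\, \sum_k\bigl(\sqrt{\nu(x_k)}-\sqrt{\varrho(x_k)}\bigr)^2 \,\leq\, \sum_k|\nu(x_k)-\varrho(x_k)|,
\]
where the last inequality is the elementary pointwise bound $(\sqrt{a}-\sqrt{b})^2\leq|a-b|$ for $a,b\geq 0$ (valid because $(\sqrt{a}+\sqrt{b})^2\geq|a-b|$ gives $(\sqrt{a}-\sqrt{b})^2=(a-b)^2/(\sqrt{a}+\sqrt{b})^2\leq|a-b|$). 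It remains to bound the final sum by $\|\nu-\varrho\|_1$. Form the hermitian test element $y=\sum_k\epsilon_k x_k\in M_{\mathsf h}$, with signs $\epsilon_k\in\{-1,+1\}$ chosen so that $\epsilon_k(\nu-\varrho)(x_k)=|(\nu-\varrho)(x_k)|$. From $-x_k\leq\epsilon_k x_k\leq x_k$ and $\sum_k x_k=\mathsf{1}$ summation gives $-\mathsf{1}\leq y\leq\mathsf{1}$, hence $\|y\|\leq 1$; and $(\nu-\varrho)(y)=\sum_k|\nu(x_k)-\varrho(x_k)|\leq\|\nu-\varrho\|_1$. Letting $\epsilon\to 0$ in the chain of inequalities produces $d_B(M|\nu,\varrho)^2\leq\|\nu-\varrho\|_1$, which is equivalent to \eqref{uppdiff1}.

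The main obstacle is the need to approximate $F$ from above by sums of the form $\sum_k\sqrt{\nu(x_k)\varrho(x_k)}$: the $vN$-algebraic equality in Lemma \ref{bas3ccc} is indispensable for this, whereas the weaker general $C^*$-estimate \eqref{bas3cc} runs in the wrong direction. This is what forces the initial reduction to the $vN$-algebraic setting through Lemma \ref{eqdist}.
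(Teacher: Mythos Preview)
Your proof is correct and takes a genuinely different, more elementary route than the paper's.

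Both proofs begin identically by invoking Lemma \ref{eqdist} to reduce to normal states on a $vN$-algebra. From there the paper appeals to modular theory: it further reduces via Corollary \ref{ctraeger} to a hereditary subalgebra possessing a faithful normal state, passes to standard form, and then quotes the Araki-type estimate $\|\xi_{\hat{\omega}}-\xi_{\hat{\mu}}\|\leq\sqrt{\|\hat{\omega}-\hat{\mu}\|_1}$ for the unique implementing vectors in the natural positive cone, combined with Definition \ref{budi}. Your argument bypasses all of this by exploiting Lemma \ref{bas3ccc} directly: the infimum characterization of $F$ over finite positive decompositions of unity, together with the pointwise bound $(\sqrt{a}-\sqrt{b})^2\leq|a-b|$ and the sign-adjusted test element $y=\sum_k\epsilon_k x_k$, yields the inequality with no recourse to standard form or the natural cone. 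Your approach is self-contained within the elementary framework of the paper up to Lemma \ref{bas3ccc}, whereas the paper's proof imports a nontrivial consequence of Tomita--Takesaki theory; on the other hand, the paper's route makes visible the geometric origin of the bound via implementing vectors in the positive cone.
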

\begin{proof}
Let $\{\pi,{\mathcal H}_\pi\}$ a unital $^*$-representation such that the $\pi$-fibres of both states exist. Consider the $vN$-algebra $N=\pi(M)^{\,\prime\prime}$. By Lemma \ref{eqdist}\,\eqref{eqdist2} one then has
$$d_B(M|\nu,\varrho)=d_B(N|\nu_\pi,\nu_\pi)$$ with the uniquely determined vector states $\nu_\pi$ and $\varrho_\pi$ over $N$ generated by implementing vectors of $\nu$ and $\varrho$ in respect of  $\{\pi,{\mathcal H}_\pi\}$. Corollary \ref{ctraeger} can be applied, and with $p=s(\nu_\pi)\vee s(\varrho_\pi)$, $N_p=p\,Np$ and $\tilde{\nu}={\nu_\pi}|N_p$, $\tilde{\varrho}={\varrho_\pi}|N_p$ we are arriving at
$$
  d_B(N|\nu_\pi,\nu_\pi)=d_B(N_p|\tilde{\nu},\tilde{\varrho})
$$
Note that by construction $N_p$ is a $vN$-algebra over $p\,{\mathcal H}_\pi$ possessing a faithful normal state. Hence it is $^*$-isomorphic to a $vN$-algebra $\hat{N}$ acting in standard form over some Hilbert space.  Accordingly, each normal state $\hat{\omega}$ over $\hat{N}$ can be implemented by a unique vector $\xi_{\hat{\omega}}$ within the natural positive cone of this standard form representation. Then, by modular theory it is known that for each two normal states the following estimate holds
$$
    \|\xi_{\hat{\omega}}-\xi_{\hat{\mu}}\|\leq \sqrt{\|\hat{\omega}-\hat{\mu}\|_1}
$$
This especially applies to the uniquely determined normal states $\hat{\nu}$ and $\hat{\varrho}$ corresponding to $\tilde{\nu}$ and $\tilde{\varrho}$. Clearly, since Lemma \ref{eqdist} can be applied to this situation, together with the previous and Definition \ref{budi} we then may conclude that
\begin{equation*}
  d_B(M|\nu,\varrho)=d_B(N_p|\tilde{\nu},\tilde{\varrho})=d_B(\hat{N}|\hat{\nu},\hat{\varrho})\leq  \|\xi_{\hat{\nu}}-\xi_{\hat{\varrho}}\|\leq \sqrt{\|\hat{\nu}-\hat{\varrho}\|_1} =\sqrt{\|\tilde{\nu}-\tilde{\varrho}\|_1}
\end{equation*}
But note that since obviously $\|\tilde{\nu}-\tilde{\varrho}\|_1=\|\nu_\pi-\varrho_\pi\|_1$ is fulfilled, by an application of Lemma \ref{eqdist}\,\eqref{eqdist1} then $\|\tilde{\nu}-\tilde{\varrho}\|_1=\|\nu-\varrho\|_1$ follows, and thus \eqref{uppdiff1} is seen to hold.
\end{proof}
\subsubsection{Uniqueness results for vectors implementing fidelity}\label{uniass}
Under certain circumstances along with the  pairs of implementing vectors mentioned on in Theorem \ref{positiv1} characteristic uniqueness assertions are arising and which later on will prove useful.
\begin{definition}\label{Fset}
Let $M$ be a unital ${\mathsf C}^*$-algebra.  For given states $\nu,\varrho\in {\mathcal S}(M)$, let ${\mathfrak F}(M|\nu,\varrho)\subset M^*$ be defined as set of all bounded linear forms $f$ over $M$ such that
\begin{enumerate}
\item\label{unique1}
$|f(y^*x)|^2\leq \nu(y^*y)\,\varrho(x^*x),\text{ for all }x,y\in M$;
\item\label{unique2}
$f({\mathsf 1})=F(M|\nu,\varrho)$.
\end{enumerate}
\end{definition}
By Theorem \ref{positiv1}, given any unital $^*$-representation
$\{\pi,{\mathcal H}_\pi\}$ such that the $\pi$-fibres of both states are non-trivial, we can be assured that $\varphi\in {\mathcal S}_{\pi,M}(\nu)$, $\xi\in {\mathcal S}_{\pi,M}(\varrho)$ exist which are obeying  $d_B(M|\nu,\varrho)=\|\xi-\varphi\|_\pi$, or equivalently, for which  \begin{equation}\label{Fset1}
F(M|\nu,\varrho)=\langle\xi,\varphi\rangle_\pi=\|h^\pi_{\xi,\varphi}\|_1
\end{equation}
holds. As we know, the latter is equivalent to $h=h^\pi_{\xi,\varphi}\geq {\mathsf 0}$. Consider $f^\pi_{\xi,\varphi}$ given as
\begin{equation}\label{unique3}
f^\pi_{\xi,\varphi}(x)=\langle\pi(x)\xi,\varphi\rangle_\pi,\ \forall x\in M
\end{equation}
In view of \eqref{Fset1} and in applying the Cauchy-Schwarz inequality with $x,y\in M$,
\[
|\langle\pi(x)\xi,\pi(y)\varphi\rangle_\pi|^2\leq \|\pi(x)\xi\|_\pi^2\, \|\pi(y)\varphi\|_\pi^2=\varrho(x^*x)\,\nu(y^*y)
\]
we infer that each of the linear forms \eqref{unique3} is a solutions of the conditions \eqref{unique1}--\eqref{unique2} of
Definition \ref{Fset}. Thus, ${\mathfrak F}(M|\nu,\varrho)\not=\emptyset$. Also, following the idea in \cite{Arak:72}, in the context of $\pi$, $\varphi$ and $\xi$ as previously specified, and associated to the given $\nu,\varrho$ let us consider another positive linear form $\varrho_\nu$ over $M$ and defined implicitly by
\begin{subequations}\label{diffreteile2aa}
\begin{equation}\label{diffreteile1}
s(h)^\perp\xi\in {\mathcal S}_{\pi,M}(\varrho_\nu)
\end{equation}
That by this $\varrho_\nu$ is unambigously determined is a consequence of the following.
\begin{lemma}\label{diffreteile}
$\mu=\varrho_\nu$  is  the largest positive linear form satisfying $\mu\leq \varrho$ and $\mu\perp\nu$.
\end{lemma}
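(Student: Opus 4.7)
My plan is to break the assertion into three separate claims: (I) $\varrho_\nu\le\varrho$; (II) $\varrho_\nu\perp\nu$; and (III) any positive linear form $\mu$ on $M$ with $\mu\le\varrho$ and $\mu\perp\nu$ satisfies $\mu\le\varrho_\nu$. I work throughout in the ambient representation $\{\pi,{\mathcal H}_\pi\}$ and exploit that $s(h)\in\pi(M)^{\,\prime}$ commutes with every element of $\pi(M)$, so the two summands of $\xi=s(h)\xi+s(h)^\perp\xi$ span mutually $\pi(M)$-invariant subspaces.

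For (I), this decomposition gives, for each $x\in M_+$,
\[
\varrho(x)=\langle\pi(x)s(h)\xi,s(h)\xi\rangle_\pi+\langle\pi(x)s(h)^\perp\xi,s(h)^\perp\xi\rangle_\pi,
\]
with both summands non-negative and the second equal to $\varrho_\nu(x)$. For (II), by Corollary \ref{subadd}\,\eqref{subadd2} it suffices to verify $F(M|\varrho_\nu,\nu)=0$, and by Lemma \ref{bas3}\,\eqref{bas3aa} this reduces to the vanishing of $h^\pi_{s(h)^\perp\xi,\varphi}$ on $\pi(M)^{\,\prime}$. A short calculation using $s(h)\in\pi(M)^{\,\prime}$ gives
\[
h^\pi_{s(h)^\perp\xi,\varphi}(z)=\langle s(h)^\perp z\xi,\varphi\rangle_\pi=h(s(h)^\perp z),\quad z\in\pi(M)^{\,\prime},
\]
which vanishes by positivity of $h$ with support $s(h)$ and a Cauchy--Schwarz estimate $|h(s(h)^\perp z)|^2\le h(s(h)^\perp)\,h(z^*s(h)^\perp z)=0$.

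The main obstacle is (III), which I approach by transferring $\mu$ to a vector-state-type representation sharing the reference vector $\xi$. The sesquilinear form $(\pi(x)\xi,\pi(y)\xi)\mapsto\mu(y^*x)$ is well-defined and bounded by $\mu\le\varrho$; extending it to $[\pi(M)\xi]=p^{\,\prime}_\pi(\xi){\mathcal H}_\pi$ and running the standard Sakai-type Radon--Nikodym argument produces $T\in\pi(M)^{\,\prime}_+$ with $0\le T\le p^{\,\prime}_\pi(\xi)$ and $\mu(x)=\langle\pi(x)\sqrt{T}\xi,\sqrt{T}\xi\rangle_\pi$ for all $x\in M$, so $\sqrt{T}\xi\in{\mathcal S}_{\pi,M}(\mu)$. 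The hypothesis $\mu\perp\nu$ then translates, via Corollary \ref{subadd}\,\eqref{subadd2} and Lemma \ref{bas3}\,\eqref{bas3aa}, to $h^\pi_{\sqrt{T}\xi,\varphi}(z)=h(z\sqrt{T})=0$ for every $z\in\pi(M)^{\,\prime}$. Evaluating at $z=\sqrt{T}$ yields $h(T)=0$; since $h$ is positive and faithful on the corner $s(h)\pi(M)^{\,\prime}s(h)$, this forces $s(h)Ts(h)=0$, hence $Ts(h)=0$, and finally the operator inequality $T\le s(h)^\perp$ inside $\pi(M)^{\,\prime}$. With this inequality in hand, for $x\in M_+$ the element $\pi(x)^{1/2}\in\pi(M)$ commutes with both $T$ and $s(h)^\perp$, so
\[
\mu(x)=\langle T\pi(x)^{1/2}\xi,\pi(x)^{1/2}\xi\rangle_\pi\le\langle s(h)^\perp\pi(x)^{1/2}\xi,\pi(x)^{1/2}\xi\rangle_\pi=\varrho_\nu(x),
\]
which completes the maximality part and therefore the proof.
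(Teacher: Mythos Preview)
Your proof is correct and follows essentially the same route as the paper's. Parts (I) and (II) coincide with the paper almost verbatim; for (III) the paper obtains an implementing vector $|k|\xi$ for $\mu$ with $k\in(\pi(M)')_1$ from the domination $\mu\le\varrho$ (via the standard intertwining argument) and then, exactly as you do, deduces $h(|k|^2)=0$ from $\mu\perp\nu$, hence $|k|^2\le s(h)^\perp$, which is your $T\le s(h)^\perp$. The only difference is that you phrase the Radon--Nikodym step through the bounded sesquilinear form rather than the norm inequality $\|x\eta\|\le\|x\xi\|$, but the resulting operator and the remainder of the argument are identical.
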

\begin{proof}
We are going to show that  $\varrho_\nu\leq \varrho$ and $\nu\perp\varrho_\nu$ are fulfilled  and that the conditions $\mu\leq \varrho$ and $\mu\perp\nu$ for $\mu\in M^*_+$ together imply that $\mu\leq \varrho_\nu$. Let $\eta=s(h)^\perp\xi\in {\mathcal H}_\pi$ and consider the $vN$-algebra $N=\pi(M)^{\,\prime\prime}$. By definition of $\eta$ and \eqref{diffreteile1}, since  $s(h)^\perp\in (N^\prime)_1$ holds, $\varrho_\nu\leq \varrho$ has to be fulfilled.
Moreover, since
$$h^\pi_{\eta,\varphi}=h^\pi_{\xi,\varphi}((\cdot) s(h)^\perp)=h((\cdot)s(h)^\perp))={\mathsf 0}$$
holds,
we have $F(M|\nu,\varrho_\nu)=0$, by Lemma \ref{bas3}\,\eqref{bas3aa}. In view of Corollary \ref{subadd}\,\eqref{subadd2} from this $\nu\perp\varrho_\nu$ follows. Now, suppose $\mu\in M^*_+$ such that $\mu\leq \varrho$ and $\mu\perp\nu$. With the help of the von-Neumann density theorem the former condition implies that $\mu_\pi\leq \varrho_\pi$ over $N$. Hence, $\|x\eta\|_\pi\leq \|x\xi\|_\pi$ for all $x\in N$. From this by standard conclusions $\eta=k\xi$ follows, for some $k\in (N^\prime)_1$. Note that then  $$\eta'=|k|\xi\in {\mathcal S}_{\pi,M}(\mu)$$ is satisfied, too. Thus, since by assumption  $\mu\perp\nu$ holds, we have $p_\pi(\eta')\perp p_\pi(\varphi)$. The latter condition owing to $|k|\in N^\prime$ especially implies that
\[
0=\langle |k|\eta',\varphi\rangle_\pi=\langle |k|^2\xi,\varphi\rangle_\pi=h(|k|^2)
\]
Hence, $s(h)|k|^2s(h)={\mathsf 0}$. From this  $|k|s(h)=s(h)|k|={\mathsf 0}$ is inferred to hold. Since $|k|\leq {\mathsf 1}$, from  the previous finally $|k|^2\leq s(h)^\perp$ can be followed. Thus, for $x\in M$,
\begin{eqnarray*}
\mu(x^*x)& =& \|\pi(x)\eta'\|_\pi^2=\||k|\pi(x)\xi\|_\pi^2=\langle|k|^2 \pi(x)\xi,\pi(x)\xi\rangle_\pi\\
&\leq &\langle s(h)^\perp \pi(x)\xi,\pi(x)\xi\rangle_\pi=\|\pi(x)s(h)^\perp\xi\|_\pi^2\\
&\leq &\varrho_\nu(x^*x)
\end{eqnarray*}
follows, that is, $\mu\leq \varrho_\nu$ is seen.
\end{proof}
By symmetry, to $\nu,\varrho\in {\mathcal S}(M)$ also  a largest positive linear form $\mu=\nu_\varrho$ satisfying $\mu\leq \nu$ and $\mu\perp\varrho$ has to exist. In terms of the above notions $\nu_\varrho$ is given by
\begin{equation}\label{diffreteile2ab}
s(h)^\perp\varphi\in {\mathcal S}_{\pi,M}(\nu_\varrho)
\end{equation}
\end{subequations}
Now, uniqueness assertions relating ${\mathfrak F}(M|\nu,\varrho)$ will be in the focus.
To prepare for this,
recall the notion of disjointness of  positive linear forms on a unital ${\mathsf C}^*$-algebras $M$.
According to this notion, $\mu$ and $\omega$ are termed `(mutually) disjoint' if, within each unital $^*$-representation $\{\pi,{\mathcal H}_\pi\}$ of $M$ such that the $\pi$-fibre of  $\mu+\omega$ exists, we have
 \[
 {\mathcal S}_{\pi,M}(\mu) =z{\mathcal S}_{\pi,M}(\mu+\omega),\ {\mathcal S}_{\pi,M}(\omega) =z^\perp{\mathcal S}_{\pi,M}(\mu+\omega)
 \]
for a central orthoprojection $z\in N\cap N^{\prime}$, where $N$ is the $vN$-algebra $N=\pi(M)^{\prime\prime}$. Obviously this is equivalent to the requirement that the following be fulfilled
\[
{\mathcal S}_{\pi,M}(\mu+\omega)={\mathcal S}_{\pi,M}(\mu)+{\mathcal S}_{\pi,M}(\omega)
\]
For $\eta\in {\mathcal H}_\pi$, let $c_\pi(\eta)$ be the orthoprojection onto the closed linear subspace \begin{equation}\label{cdef}
[\pi(M)^{\,\prime}\pi(M)\eta]\subset {\mathcal H}_\pi
\end{equation}
Then, it is easily seen from \eqref{cdef} that $z=c_\pi(\eta)\in N\cap N^\prime$ is the smallest central orthoprojection obeying $z\eta=\eta$, and is also the smallest central orthoprojection obeying $z\geq p_\pi(\eta)$ and  $z\geq p^{\,\prime}_\pi(\eta)$. More generally, at an orthoprojection $q\in N$ or $q\in N^\prime$, the operation $c_\pi[\cdot]$ of going to  the central support of $q$ takes as value $z=c_\pi[q]$, the smallest central orthoprojection $z$ obeying  $z\geq q$. In terms of these notions, for $\eta\in {\mathcal S}_{\pi,M}(\mu)$ and $\chi\in {\mathcal S}_{\pi,M}(\omega)$, owing to $s(\mu_\pi)=p_\pi(\eta)$ and
$
c_\pi(\eta)=c_\pi[p_\pi(\eta)]=c_\pi[p^{\,\prime}_\pi(\eta)]
$
 (and an analogous identity relating to  $\chi$ and $\omega$, respectively)
 disjointness among $\mu$ and $\omega$ equivalently can be stated as orthogonality of the respective central supports, that is ${\mathsf 0}=c_\pi(\eta)\,c_\pi(\chi)$, or  ${\mathsf 0}=c_\pi[p^{\,\prime}_\pi(\eta)]\,c_\pi[p^{\,\prime}_\pi(\chi)]$, or  ${\mathsf 0}=c_\pi[p_\pi(\eta)]\,c_\pi[p_\pi(\chi)]$, which latter is the same as  ${\mathsf 0}=c_\pi[s(\mu_\pi)]\,c_\pi[s(\omega_\pi)]$, with the central supports of the support orthoprojections $s(\mu_\pi)$ and $s(\omega_\pi)$ of $\mu_\pi$ and $\omega_\pi$, respectively.
The main goal of this paragraph is the proof of the following result.
\begin{theorem}\label{unique}
For  $\nu,\varrho\in {\mathcal S}(M)$ the following are mutually equivalent:
\begin{enumerate}
\item \label{u1}
$\varrho_\nu$ and $\nu_\varrho$ are mutually disjoint;
\item \label{u2}
${\mathfrak F}(M|\nu,\varrho)=\{f\}$, with a uniquely determined linear form $f$.
\end{enumerate}
Moreover, the unique linear form $f\in {\mathfrak F}(M|\nu,\varrho)$  can be written as
\begin{enumerate}
\setcounter{enumi}{2}
\item \label{u3}
$f=f^\pi_{\xi,\varphi}$
\end{enumerate}
with respect to any unital $^*$-representation
$\{\pi,{\mathcal H}_\pi\}$ such that  the $\pi$-fibres of $\nu$ and $\varrho$ exist, and with
$\varphi\in {\mathcal S}_{\pi,M}(\nu)$,
$\xi\in {\mathcal S}_{\pi,M}(\varrho)$
chosen such that $F(M|\nu,\varrho)=\langle\xi,\varphi\rangle_\pi$.
\end{theorem}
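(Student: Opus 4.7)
The plan is to analyze ${\mathfrak F}(M|\nu,\varrho)$ inside a single $^*$-representation and then translate the uniqueness condition into algebraic disjointness. I fix a unital $^*$-representation $\{\pi,{\mathcal H}_\pi\}$ such that both $\pi$-fibres are non-trivial and, by Theorem \ref{positiv1}, choose $\varphi\in{\mathcal S}_{\pi,M}(\nu)$ and $\xi\in{\mathcal S}_{\pi,M}(\varrho)$ with $F(M|\nu,\varrho)=\langle\xi,\varphi\rangle_\pi$. Then $h=h^\pi_{\xi,\varphi}$ is a positive normal form on $\pi(M)^{\,\prime}$ with $\|h\|_1=h({\mathsf 1})=F(M|\nu,\varrho)$, and $f^\pi_{\xi,\varphi}$ already lies in ${\mathfrak F}(M|\nu,\varrho)$ by the computation preceding Definition \ref{Fset}. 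First I would show that every $f\in{\mathfrak F}(M|\nu,\varrho)$ can be written as $f(x)=\langle K\pi(x)\xi,\varphi\rangle_\pi$ for some contraction $K\in\pi(M)^{\,\prime}$. This follows from \eqref{unique1} by the same construction as in the proof of Lemma \ref{bas3}: the Cauchy--Schwarz bound makes the sesquilinear form $(\pi(x)\xi,\pi(y)\varphi)\longmapsto f(y^*x)$ well-defined and contractive on $\pi(M)\xi\times\pi(M)\varphi$, it extends continuously to $p^{\,\prime}_\pi(\xi){\mathcal H}_\pi\times p^{\,\prime}_\pi(\varphi){\mathcal H}_\pi$, and the module identity $f(y^*(zx))=f((z^*y)^*x)$ forces the implementing contraction $K$ (extended by zero on the orthogonal complements) to commute with $\pi(M)$.

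Condition \eqref{unique2} then reads $h(K)=h({\mathsf 1})=\|h\|_1$. By the standard extremality principle for positive normal forms on a $vN$-algebra (applied to both $K$ and $K^*$) this forces $s(h)K=Ks(h)=s(h)$, so that $K=s(h)+L$ with $L$ a contraction in $s(h)^\perp\pi(M)^{\,\prime}s(h)^\perp$. Setting $\xi_0=s(h)^\perp\xi$ and $\varphi_0=s(h)^\perp\varphi$, a direct calculation using that $s(h),s(h)^\perp\in\pi(M)^{\,\prime}$ commute with $\pi(M)$ yields
\[
 f(x)-f^\pi_{\xi,\varphi}(x)=\langle\pi(x)(L-s(h)^\perp)\xi_0,\varphi_0\rangle_\pi.
\]
By \eqref{diffreteile1} and \eqref{diffreteile2ab} the vectors $\xi_0$ and $\varphi_0$ implement $\varrho_\nu$ and $\nu_\varrho$, respectively. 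Thus ${\mathfrak F}(M|\nu,\varrho)=\{f^\pi_{\xi,\varphi}\}$ is equivalent to the vanishing of $\langle\pi(x)N\xi_0,\varphi_0\rangle_\pi$ for every $x\in M$ and every $N\in s(h)^\perp\pi(M)^{\,\prime}s(h)^\perp$.

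The last step is to recognize this vanishing as the disjointness condition. As $N$ and $x$ range freely, the vectors $\pi(x)N\xi_0$ span a norm-dense subspace of $s(h)^\perp\,c_\pi(\xi_0){\mathcal H}_\pi$; here $s(h)^\perp\pi(M)^{\,\prime}s(h)^\perp$ is the commutant of the compressed algebra $\pi(M)s(h)^\perp$ on $s(h)^\perp{\mathcal H}_\pi$, while $c_\pi(\xi_0)$ projects onto the closure of $\pi(M)\pi(M)^{\,\prime}\xi_0$. Since $\varphi_0\in s(h)^\perp{\mathcal H}_\pi$ and $c_\pi(\xi_0)$ is central, the above vanishing collapses to $c_\pi(\xi_0)\varphi_0={\mathsf 0}$, and since $c_\pi(\xi_0),\,c_\pi(\varphi_0)$ are the central supports of $\varrho_\nu,\,\nu_\varrho$, this is precisely $c_\pi(\xi_0)c_\pi(\varphi_0)={\mathsf 0}$, i.e.\ disjointness of $\varrho_\nu$ and $\nu_\varrho$; this proves \eqref{u1}$\Leftrightarrow$\eqref{u2}, and since $f^\pi_{\xi,\varphi}$ always lies in ${\mathfrak F}(M|\nu,\varrho)$, the singleton must coincide with $\{f^\pi_{\xi,\varphi}\}$, giving \eqref{u3}. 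The main obstacle is precisely the density identification underlying the last implication: one must carefully track how $s(h)^\perp\pi(M)^{\,\prime}s(h)^\perp$ acts on $\xi_0$ in the compression to $s(h)^\perp{\mathcal H}_\pi$ and relate the resulting cyclic and commutant projections to the genuine central support $c_\pi(\xi_0)$, since $s(h)\in\pi(M)^{\,\prime}$ need not be central. Everything else---the representation of $f$, the extremality argument and the algebraic bookkeeping---is standard once this density is established.
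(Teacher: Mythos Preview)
Your approach is correct and, for the implication \eqref{u2}$\Rightarrow$\eqref{u1}, more direct than the paper's. The paper proves \eqref{u1}$\Rightarrow$\eqref{u2} essentially as you do (this is Lemma \ref{repf} together with the decomposition $K=s(h)+L$), but for the converse it invokes a separate constructive lemma (Lemma \ref{disj}): assuming non-disjointness, it uses Murray--von Neumann comparison of $p^{\,\prime}_\pi(\xi)-s(h)$ and $p^{\,\prime}_\pi(\varphi)-s(h)$ to manufacture new implementing vectors $\xi',\varphi'$ with $(p^{\,\prime}_\pi(\xi')-s(h'))(p^{\,\prime}_\pi(\varphi')-s(h'))\neq 0$, and then a phase rotation on $\xi'$ produces two distinct elements of ${\mathfrak F}(M|\nu,\varrho)$ with different real parts. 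Your route bypasses this comparison-theory machinery: once every $f$ is parametrized by a contraction $L$, uniqueness becomes a pure orthogonality condition, and disjointness drops out of the definition of the central support. The paper's construction does buy something extra---the non-unique forms it exhibits are of the special type $f^\pi_{\xi',\varphi'}$ with $\xi',\varphi'$ genuine implementing vectors---but this is not needed for the theorem as stated.

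The obstacle you flag is not real. Because $\xi_0,\varphi_0\in s(h)^\perp{\mathcal H}_\pi$ and $s(h)^\perp\in\pi(M)^{\,\prime}$, one has $\langle\pi(x)N\xi_0,\varphi_0\rangle_\pi=\langle\pi(x)\,s(h)^\perp N s(h)^\perp\,\xi_0,\varphi_0\rangle_\pi$ for every $N\in\pi(M)^{\,\prime}$, so the vanishing over $s(h)^\perp\pi(M)^{\,\prime}s(h)^\perp$ is identical to the vanishing over all of $\pi(M)^{\,\prime}$; the latter says $\varphi_0\perp[\pi(M)\pi(M)^{\,\prime}\xi_0]=c_\pi(\xi_0){\mathcal H}_\pi$ directly from \eqref{cdef}, with no compression needed and no role for the non-centrality of $s(h)$. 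One point worth making explicit: condition \eqref{unique2} is \emph{automatic} for $K=s(h)+L$ with any contraction $L$ supported under $s(h)^\perp$, since $h(L)=0$; hence every such $L$ yields an element of ${\mathfrak F}$, and taking $L=0$ gives $\langle\pi(x)\xi_0,\varphi_0\rangle_\pi=0$ immediately, after which scaling delivers the full vanishing.
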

The proof will be divided into several steps. To start with, we are going to derive a representation formula for the linear forms of ${\mathfrak F}(M|\nu,\varrho)$, for a non-restricted choice of states $\nu,\varrho\in {\mathcal S}(M)$. In accordance with Theorem \ref{positiv1}, suppose $\varphi\in {\mathcal S}_{\pi,M}(\nu)$,
$\xi\in {\mathcal S}_{\pi,M}(\varrho)$
be chosen as to obey  $F(M|\nu,\varrho)=\langle\xi,\varphi\rangle_\pi$, within some arbitrary but fixed  unital $^*$-representation
$\{\pi,{\mathcal H}_\pi\}$  providing non-trivial $\pi$-fibres. Let  $N=\pi(M)^{\,\prime\prime}$.
\begin{lemma}\label{repf}
 For $f\in {\mathfrak F}(M|\nu,\varrho)$.  there exists $K\in (N^{\,\prime})_1$ with the following properties:
 \begin{enumerate}
 \item \label{repf1}
 $f(x)=\langle\pi(x)K\xi,\varphi\rangle_\pi$, $\forall x\in M$;
  \item \label{repf2}
  $p^{\,\prime}_\pi(\varphi)K=Kp^{\,\prime}_\pi(\xi)=K$;
 \item \label{repf3}
 $K s(h)=K^* s(h)=s(h)$.
 \end{enumerate}
\end{lemma}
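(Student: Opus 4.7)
The strategy is to read Definition \ref{Fset}\,\eqref{unique1} as a Cauchy--Schwarz bound for a sesquilinear form built from $f$, then invoke the Riesz lemma to manufacture a contraction $K$ in $N^{\,\prime}=\pi(M)^{\,\prime}$ implementing \eqref{repf1}, and finally exploit Definition \ref{Fset}\,\eqref{unique2} together with positivity of $h=h^{\pi}_{\xi,\varphi}$ on $N^{\,\prime}$ to obtain the boundary relation \eqref{repf3}.

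\textbf{Construction of $K$.} On $\pi(M)\xi\times\pi(M)\varphi$ consider the form $B_0(\pi(x)\xi,\pi(y)\varphi)=f(y^*x)$. The estimate in \eqref{unique1} reads $|f(y^*x)|\leq \|\pi(x)\xi\|_\pi\,\|\pi(y)\varphi\|_\pi$, and this at once shows that $B_0$ is well defined and contractive. It extends by continuity to a sesquilinear form $B$ of norm at most one on $p_\pi^{\,\prime}(\xi){\mathcal H}_\pi\times p_\pi^{\,\prime}(\varphi){\mathcal H}_\pi$, and the Riesz lemma yields an operator $T:p_\pi^{\,\prime}(\xi){\mathcal H}_\pi\to p_\pi^{\,\prime}(\varphi){\mathcal H}_\pi$ with $\|T\|\leq 1$ and $B(\zeta,\eta)=\langle T\zeta,\eta\rangle_\pi$. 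Comparing the evaluations
\[
f(y^*(zx))=\langle T\pi(zx)\xi,\pi(y)\varphi\rangle_\pi,\qquad f((z^*y)^*x)=\langle T\pi(x)\xi,\pi(z^*y)\varphi\rangle_\pi
\]
for arbitrary $x,y,z\in M$ shows $T\pi(z)=\pi(z)T$ on $p_\pi^{\,\prime}(\xi){\mathcal H}_\pi$. Extending $T$ by zero off $p_\pi^{\,\prime}(\xi){\mathcal H}_\pi$ gives $K\in {\mathsf B}({\mathcal H}_\pi)$ with $\|K\|\leq 1$, $Kp_\pi^{\,\prime}(\xi)=K$ and $p_\pi^{\,\prime}(\varphi)K=K$. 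Since $p_\pi^{\,\prime}(\xi)\in N^{\,\prime}$, the chain
\[
K\pi(z)=Tp_\pi^{\,\prime}(\xi)\pi(z)=T\pi(z)p_\pi^{\,\prime}(\xi)=\pi(z)Tp_\pi^{\,\prime}(\xi)=\pi(z)K
\]
proves $K\in N^{\,\prime}$, securing \eqref{repf2}; specialising $y={\mathsf 1}$ and using $\xi\in p_\pi^{\,\prime}(\xi){\mathcal H}_\pi$ yields $f(x)=\langle T\pi(x)\xi,\varphi\rangle_\pi=\langle \pi(x)K\xi,\varphi\rangle_\pi$, which is \eqref{repf1}.

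\textbf{Derivation of \eqref{repf3}.} By Lemma \ref{bas3}\,\eqref{bas3aa} and the choice of $\varphi,\xi$ one has $h({\mathsf 1})=\langle\xi,\varphi\rangle_\pi=F(M|\nu,\varrho)=\|h\|_1$, so $h\geq {\mathsf 0}$ on $N^{\,\prime}$. Combining \eqref{unique2} with \eqref{repf1} yields $h(K)=\langle K\xi,\varphi\rangle_\pi=f({\mathsf 1})=h({\mathsf 1})$. The Cauchy--Schwarz inequality for the positive form $h$, $|h(K)|^{2}\leq h({\mathsf 1})h(K^*K)$, combined with $K^*K\leq{\mathsf 1}$, forces $h(K^*K)=h({\mathsf 1})$; together with $h(K^*)=\overline{h(K)}=h({\mathsf 1})$ this makes $h\bigl(({\mathsf 1}-K)^*({\mathsf 1}-K)\bigr)$ collapse to zero. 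The support property of a positive normal form then gives $({\mathsf 1}-K)s(h)={\mathsf 0}$, i.e.~$Ks(h)=s(h)$, and repeating the argument with $K^*$ in place of $K$ yields $K^*s(h)=s(h)$.

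The main technical point is the intertwining step that lifts $T$ from a contraction between the two cyclic subspaces $[\pi(M)\xi]$ and $[\pi(M)\varphi]$ to an element of the full commutant $N^{\,\prime}$; once this is in place, \eqref{repf1} and \eqref{repf2} fall out of the construction, and \eqref{repf3} reduces to the positivity of $h$ together with a standard support argument.
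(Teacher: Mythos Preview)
Your construction of $K$ via the Riesz lemma and the intertwining argument is essentially the paper's own proof of \eqref{repf1} and \eqref{repf2}; the two treatments coincide there.

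For \eqref{repf3}, your argument is correct and in fact more economical than the paper's. The paper proceeds by first using Cauchy--Schwarz to deduce that $K$ and $K^*$ commute with $s(h)$, then splits $K=a+{\mathsf i}b$ into hermitian parts, shows $a^2 s(h)=s(h)$ and $b^2 s(h)={\mathsf 0}$ separately, and only then concludes $Ks(h)=s(h)$. Your approach bypasses this decomposition entirely: from $h(K)=h(K^*)=h(K^*K)=h({\mathsf 1})$ you expand $h\bigl(({\mathsf 1}-K)^*({\mathsf 1}-K)\bigr)=0$ and invoke the support property of the normal positive form $h$ (namely that $h(y^*y)=0$ forces $ys(h)={\mathsf 0}$) to obtain $Ks(h)=s(h)$ in one stroke. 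This is cleaner and avoids the intermediate commutation step; the paper's longer route has the minor side benefit of making the commutation $Ks(h)=s(h)K$ explicit, but that relation is not actually needed for the lemma as stated.
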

\begin{proof}
The linear form $f$ has to obey   Definition \ref{Fset}\,\eqref{unique1}--\eqref{unique2}.  By specification of $\{\pi,{\mathcal H}_\pi\}$,  $\varphi\in {\mathcal S}_{\pi,M}(\nu)$ and
$\xi\in {\mathcal S}_{\pi,M}(\varrho)$ in accordance with Theorem \ref{positiv1} we have $$F(M|\nu,\varrho)=\langle\xi,\varphi\rangle_\pi$$
By Definition \ref{Fset}\,\eqref{unique1}, one then has
\[
|f(y^*x)|\leq \|\pi(y)\varphi\|_\pi\|\pi(x)\xi\|_\pi
\]
for all $x,y\in M$. Hence, the mapping
\[
\pi(M)\xi\times \pi(M)\varphi\ni (\pi(x)\xi,\pi(y)\varphi)\longmapsto f(y^*x)
\]
is a bounded by one sesquilinear form, which by continuity extends to a bounded by one sesquilinear form mapping from $p^{\,\prime}_\pi(\xi){\mathcal H}_\pi\times p^{\,\prime}_\pi(\varphi){\mathcal H}_\pi$ into ${\mathbb C}$. By standard conclusions with the help of the Riesz-Theorem we infer that for $x,y\in M$
\[
f(y^*x)=\langle K\pi(x)\xi,\pi(y)\varphi\rangle_\pi
\]
holds, with some $K\in {\mathsf B}({\mathcal H}_\pi)$ obeying $\|K\|\leq 1$ and
$p^{\,\prime}_\pi(\varphi)K p^{\,\prime}_\pi(\xi)=K$. In addition, from this and since $f((yz)^*x)=f(z^*(y^*x))$ is fulfilled, for any $x,y,z\in M$, we infer
\begin{equation}\label{unique4}
f(x)=\langle \pi(x)K\xi,\varphi\rangle_\pi
\end{equation}
with $K\in N^{\,\prime}$,  $\|K\|\leq 1$ and $p^{\,\prime}_\pi(\varphi)K =K p^{\,\prime}_\pi(\xi)=K$ (the same   arguments have been used while proving Lemma \ref{bas3}\,\eqref{bas3aa}). Thus, \eqref{repf1}--\eqref{repf2} hold. Also, by Theorem \ref{bas5}\,\eqref{bas53}, Lemma \ref{bas3}\,\eqref{bas3aa}, $
h=h^\pi_{\xi,\varphi}\geq {\mathsf 0}$ and
$$
h({\mathsf 1})=F(M|\nu,\varrho)=\|h\|_1
$$
have to be fulfilled over  $N^{\,\prime}$. Suppose $F(M|\nu,\varrho)=0$ first. Then,   from the previous $h={\mathsf 0}$ follows. Hence, $s(h)={\mathsf 0}$ for the support orthoprojection $s(h)$ of $h$ in this case, and then \eqref{repf3} is fulfilled in a trivial way.
Now, suppose that $F(M|\nu,\varrho)\not=0$. Then,   from the previous $h\not={\mathsf 0}$ follows, and thus  $s(h)\not={\mathsf 0}$ and
$$
h({\mathsf 1})=F(M|\nu,\varrho)=\|h\|_1 \not=0
$$
Especially, by hermiticity of $h$ (remind that $h$ is positive) and in view of the representation \eqref{unique4} of $f$ the condition Definition \ref{Fset}\,\eqref{unique2} in terms of $h$ and $K$ implies
\begin{subequations}\label{forunique}
\begin{equation}\label{unique5}
h(K)=h(K^*)=F(M|\nu,\varrho)=h({\mathsf 1})=\|h\|_1\not=0
\end{equation}
We are going to show that $Ks(h)=s(h)K$ holds. In fact, by the Cauchy-Schwarz inequality, for the positive linear form $h$ with support $s(h)$ we infer that
\[
\|h\|_1^2=h(K)^2=h(s(h)Ks(h))^2\leq \|h\|_1  h(s(h)K^*s(h)Ks(h))\leq \|h\|_1 h(s(h)K^*Ks(h))
\]
Hence, and since $\|K\|\leq 1$ is fulfilled, the estimate  $$\|h\|_1\leq  h(s(h)K^*s(h)Ks(h))\leq h(s(h)K^*Ks(h))\leq \|h\|_1$$ can be followed. The conclusion is that
$
s(h)=s(h)K^*Ks(h)=s(h)K^*s(h)Ks(h)
$.
Hence $s(h)K^*s(h)^\perp Ks(h)={\mathsf 0}$, and thus $s(h)^\perp Ks(h)={\mathsf 0}$. From this $Ks(h)=s(h)Ks(h)$ follows. On the other hand, owing to \eqref{unique5}, the same line of conclusions has to remain true with $K^*$ instead of $K$. Thus
$
s(h)=s(h)K K^*s(h)=s(h)Ks(h)K^*s(h)
$
and $K^*s(h)=s(h)K^*s(h)$ hold, from which $s(h)K=s(h)Ks(h)$ follows. In summarizing, commutation of both $K$ and $K^*$ with $s(h)$ follow. Thus, from the previous
\begin{equation}\label{unique6}
s(h)=\frac{1}{2}\bigl(K K^*+K^*K\bigr)s(h)
\end{equation}
is obtained. Suppose $K=a+{\mathsf{i}}\, b$, with hermitian $a,b\in (N^{\,\prime})_{\mathsf h}$. As usually, one has $$a=\frac{K+K^*}{2},\,b=\frac{K-K^*}{2\,{\mathsf{i}}},\,a^2+b^2=\frac{1}{2}\bigl(K K^*+K^*K\bigr)$$
Hence, $a$ and $b$ both are commuting with $s(h)$, and therefore in view of \eqref{unique6} the hermitian operators $as(h)$ and $bs(h)$ are obeying
\begin{equation}\label{unique7}
s(h)= s(h)a^2+s(h)b^2
\end{equation}                          By positivity of $h$ and owing to $\|a\|\leq 1$ and in view of \eqref{unique5} from the Cauchy-Schwarz inequality we get the following estimates
\[
\|h\|^2_1=h\Bigl(\frac{K+K^*}{2}\Bigr)^2=h(a)^2=h(as(h))\leq h({\mathsf 1}) h(a^2s(h))=\|h\|_1 h(a^2s(h))\leq \|h\|_1^2
\]
Thus,  $\|h\|_1= h(a^2s(h))$ is seen, and from which owing to $a^2s(h)\leq s(h)$ then $a^2s(h)=s(h)$ follows. Accordingly, we get $b^2s(h)={\mathsf 0}$ by \eqref{unique7}. From this $b s(h)={\mathsf 0}$ is obtained, and therefore $Ks(h)=a s(h)$ is hermitian. Also, from this and \eqref{unique5} we get
\[
h(a s(h))=h(Ks(h))=h(K)=\|h\|_1=h(s(h))
\]
Since $a s(h)$ is hermitian and $\|a\|\leq 1$, $a s(h)\leq s(h)$ follows. Thus, $s(h)-a s(h)\geq {\mathsf 0}$, and in view of the previous  $a s(h)=s(h)$ follows. Hence, the conclusion is
\begin{equation}\label{unique8}
K s(h)=K^*s(h)=s(h)
\end{equation}
\end{subequations}
and which proves \eqref{repf3} in case of $F(M|\nu,\varrho)\not=0$, too.
\end{proof}
\begin{lemma}\label{disj}
Suppose $\nu,\varrho\in {\mathcal S}(M)$ such that $\varrho_\nu$ and $\nu_\varrho$ are not mutually disjoint. Then, within each unital $^*$-representation $\{\pi,{\mathcal H}\}$ of $M$ such that the $\pi$-fibres of $\nu$ and $\varrho$ both exist, there are linear forms $f_1,f_2$ as specified in \eqref{unique3} and obeying $\re f_1\not=\re f_2$.
\end{lemma}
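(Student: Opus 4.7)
The plan is to exhibit, within the given unital $^*$-representation $\{\pi,{\mathcal H}_\pi\}$, two pairs of implementing vectors producing linear forms of the type \eqref{unique3} lying in ${\mathfrak F}(M|\nu,\varrho)$ and having distinct hermitian parts. To begin, by Theorem \ref{positiv1} fix $\varphi_0\in {\mathcal S}_{\pi,M}(\nu)$ and $\xi_0\in {\mathcal S}_{\pi,M}(\varrho)$ with $F(M|\nu,\varrho)=\langle\xi_0,\varphi_0\rangle_\pi$, and let $h=h^\pi_{\xi_0,\varphi_0}$ be the positive linear form on $N^{\,\prime}=\pi(M)^{\,\prime}$ whose support orthoprojection $s(h)$ sits below $p^{\,\prime}_\pi(\xi_0)\wedge p^{\,\prime}_\pi(\varphi_0)$. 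The first distinguished form is the canonical $f_1:=f^\pi_{\xi_0,\varphi_0}$; the second will arise as an ``off-support'' perturbation.

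First I would translate the non-disjointness of $\varrho_\nu$ and $\nu_\varrho$ into an algebraic condition in $N^{\,\prime}$. By \eqref{diffreteile2aa}, $s(h)^\perp\xi_0$ and $s(h)^\perp\varphi_0$ implement $\varrho_\nu$ and $\nu_\varrho$, and a short computation (using $s(h)\leq p^{\,\prime}_\pi(\xi_0)$ and $s(h)^\perp\in N^{\,\prime}$) gives $p^{\,\prime}_\pi(s(h)^\perp\xi_0)=p^{\,\prime}_\pi(\xi_0)-s(h)$ together with the analogous identity for $\varphi_0$. Since for any vector $\eta$ the central support of $p_\pi(\eta)$ in $N$ coincides with that of $p^{\,\prime}_\pi(\eta)$ in $N^{\,\prime}$ (both equal $c_\pi(\eta)$, the orthoprojection onto $[NN^{\,\prime}\eta]$), the failure of disjointness of $\varrho_\nu$ and $\nu_\varrho$ is equivalent to the non-orthogonality of the central supports of $p^{\,\prime}_\pi(\xi_0)-s(h)$ and $p^{\,\prime}_\pi(\varphi_0)-s(h)$ in $N^{\,\prime}$. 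The Murray--von Neumann comparison theorem applied in $N^{\,\prime}$ then supplies a non-zero partial isometry $v\in N^{\,\prime}$ with
\[
v^*v\leq p^{\,\prime}_\pi(\xi_0)-s(h)\quad\text{and}\quad vv^*\leq p^{\,\prime}_\pi(\varphi_0)-s(h)\,.
\]

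Next I would use $v$ to build a one-parameter family of alternative implementations of $\varrho$. For $\theta\in{\mathbb R}$ put $V_\theta=p^{\,\prime}_\pi(\xi_0)+(e^{i\theta}-1)v^*v\in N^{\,\prime}$. Elementary expansion using $(v^*v)^2=v^*v\leq p^{\,\prime}_\pi(\xi_0)$ yields $V_\theta^*V_\theta=p^{\,\prime}_\pi(\xi_0)$, so by \eqref{bas4} the vector $\xi_\theta:=V_\theta\xi_0$ lies in ${\mathcal S}_{\pi,M}(\varrho)$. Positivity of $h$ with support $s(h)$ combined with $v^*v\perp s(h)$ forces $h(v^*v)=0$, hence $\langle\xi_\theta,\varphi_0\rangle_\pi=h(V_\theta)=F(M|\nu,\varrho)$; thus $f_\theta:=f^\pi_{\xi_\theta,\varphi_0}\in{\mathfrak F}(M|\nu,\varrho)$ and
\[
f_\theta(x)=f_1(x)+(e^{i\theta}-1)\,g(x),\qquad g(x):=\langle\pi(x)v^*v\xi_0,\varphi_0\rangle_\pi.
\]
A symmetric family, obtained by perturbing $\varphi_0$ through $W_\psi=p^{\,\prime}_\pi(\varphi_0)+(e^{i\psi}-1)vv^*$ and $\varphi_\psi:=W_\psi\varphi_0$, produces further members of ${\mathfrak F}(M|\nu,\varrho)$ of type \eqref{unique3} whose differences with $f_1$ are multiples of $g^\prime(x):=\langle\pi(x)vv^*\xi_0,\varphi_0\rangle_\pi$.

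It then remains to select the parameters so that two of the above forms have distinct hermitian parts. If $\re g\not\equiv 0$ on $M_{\mathsf h}$, the choice $\theta=\pi$ already suffices since then $f_\pi-f_1=-2g$; if instead $g$ is purely anti-hermitian but not identically zero, the choice $\theta=\pi/2$ yields $\re(f_{\pi/2}-f_1)(x_0)=-\im g(x_0)\neq 0$ at an appropriate hermitian $x_0$. The degenerate possibility $g\equiv{\mathsf 0}$, which would mean $p^{\,\prime}_\pi(\varphi_0)v^*v\xi_0={\mathsf 0}$, is then handled by switching to the symmetric family and applying the same dichotomy to $g^\prime$. The residual case, in which both $g$ and $g^\prime$ vanish identically, is the principal obstacle; it has to be ruled out by exploiting the freedom in the choice of $v$ granted by the comparison theorem (replacing $v$ by another equivalent intertwiner), showing that a fully degenerate configuration would in fact force $v={\mathsf 0}$, contradicting the non-disjointness assumption. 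Pinning down this last contradiction via careful book-keeping between the projection lattices of $N$ and $N^{\,\prime}$ is the most delicate step of the argument.
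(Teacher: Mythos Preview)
Your overall strategy---phase-rotate one implementing vector off the support of $h$ and read off the change in $\re f$---is the same as the paper's, and the setup through the comparison-theorem partial isometry $v$ is fine. The genuine gap is precisely your ``residual case'', and unfortunately it is not residual at all: it is the generic obstruction, and the fix you sketch (varying $v$) cannot work.

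Concretely, suppose $(p^{\,\prime}_\pi(\xi_0)-s(h))\,(p^{\,\prime}_\pi(\varphi_0)-s(h))={\mathsf 0}$, which is perfectly compatible with non-disjointness of the central supports. Then for \emph{any} partial isometry $v\in N^{\,\prime}$ with $v^*v\leq p^{\,\prime}_\pi(\xi_0)-s(h)$ and $vv^*\leq p^{\,\prime}_\pi(\varphi_0)-s(h)$ one has $v^*v\varphi_0=0$ (since $v^*v\perp s(h)$ and $v^*v\perp p^{\,\prime}_\pi(\varphi_0)-s(h)$), hence $g(x)=\langle\pi(x)\xi_0,v^*v\varphi_0\rangle_\pi\equiv 0$; likewise $vv^*\xi_0=0$ forces $g'\equiv 0$. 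So both perturbations are identically trivial, regardless of which equivalent intertwiner you pick. The trouble is structural: your $V_\theta$ and $W_\psi$ are \emph{diagonal} unitaries in $N^{\,\prime}$ (they fix the supports $p^{\,\prime}_\pi(\xi_0)$, $p^{\,\prime}_\pi(\varphi_0)$), so they cannot create any overlap between the two off-support projections.

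What the paper does in this case is to use the partial isometry $w$ from the comparison theorem in an \emph{off-diagonal} way: it builds $u=w+(p^{\,\prime}_\pi(\varphi_0)-s(h))z^\perp+s(h)$ with $u^*u=p^{\,\prime}_\pi(\varphi_0)$ and replaces $\varphi_0$ by $\varphi'=u\varphi_0$. One checks $h'=h^\pi_{\xi_0,\varphi'}=h$, but now $p^{\,\prime}_\pi(\varphi')-s(h')=ww^*+\cdots$ genuinely overlaps $p^{\,\prime}_\pi(\xi_0)-s(h)$. After this preliminary move the simple phase rotation $\xi'=e^{i\alpha}(p^{\,\prime}_\pi(\xi_0)-s(h))\xi_0+s(h)\xi_0$ produces $f^\pi_{\xi',\varphi'}-f^\pi_{\xi_0,\varphi'}=(e^{i\alpha}-1)\,f^\pi_{(p^{\,\prime}_\pi(\xi_0)-s(h))\xi_0,\,(p^{\,\prime}_\pi(\varphi')-s(h))\varphi'}$, and the last form is nonzero precisely because the two projections now fail to be orthogonal. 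In short: the missing idea is that $v$ must be used to \emph{transport} an implementing vector rather than merely to label a phase-rotation subspace.
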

\begin{proof}
For $\{\pi,{\mathcal H}\}$ as mentioned, in line with Lemma \ref{nv} and Theorem \ref{positiv1} there exist    $\varphi\in {\mathcal S}_{\pi,M}(\nu|\varrho)$,  $\xi\in {\mathcal S}_{\pi,M}(\varrho)$ with   $d_B(\varrho,\nu)=\|\xi-\varphi\|_\pi$. By Lemma \ref{diffreteile} and \eqref{diffreteile2aa}, with
\begin{equation}\label{disj0}
h=h^\pi_{\xi,\varphi}\geq {\mathsf 0}
\end{equation}
we have $\eta=s(h)^\perp\xi \in{\mathcal S}_{\pi,M}(\varrho_\nu)$ and $\chi=s(h)^\perp\varphi \in{\mathcal S}_{\pi,M}(\nu_\varrho)$. Note that owing to $s(h)\leq p^{\,\prime}_\pi(\xi)$ and $s(h)\leq p^{\,\prime}_\pi(\varphi)$ then $p^{\,\prime}_\pi(\eta)=p^{\,\prime}_\pi(\xi)-s(h)$ and $p^{\,\prime}_\pi(\chi)=p^{\,\prime}_\pi(\varphi)-s(h)$ hold.  The supposition of non-disjointness between $\varrho_\nu$ and $\nu_\varrho$ then amounts to
\begin{equation}\label{disj1}
c_\pi[p^{\,\prime}_\pi(\xi)-s(h)]\,c_\pi[p^{\,\prime}_\pi(\varphi)-s(h)]\not={\mathsf 0}
\end{equation}
We are going to construct $\xi'\in{\mathcal S}_{\pi,M}(\varrho)$, $\varphi'\in{\mathcal S}_{\pi,M}(\nu)$ with  $h'=h^\pi_{\xi',\varphi'}\geq {\mathsf 0}$ and
\begin{equation}\label{disj2}
(p^{\,\prime}_\pi(\xi')-s(h'))\,(p^{\,\prime}_\pi(\varphi')-s(h'))\not={\mathsf 0}
\end{equation}
Clearly, it might happen that this fact is yet occuring  for the choice $\xi'=\xi$ and $\varphi'=\varphi$. If this however will not be the case, that is, if
\begin{equation}\label{disj3}
(p^{\,\prime}_\pi(\xi)-s(h))\,(p^{\,\prime}_\pi(\varphi)-s(h))={\mathsf 0}
\end{equation}
is fulfilled, let us consider
 a central orthoprojection $z\in N\cap N^\prime$ chosen such that
\begin{equation}\label{disj4}
(p^{\,\prime}_\pi(\xi)-s(h))z\succ (p^{\,\prime}_\pi(\varphi)-s(h))z,\ (p^{\,\prime}_\pi(\varphi)-s(h))z^\perp\succ (p^{\,\prime}_\pi(\xi)-s(h))z^\perp
\end{equation}
By standard facts such $z$ exists. Thereby, $(p^{\,\prime}_\pi(\xi)-s(h))z$ and $(p^{\,\prime}_\pi(\varphi)-s(h))z^\perp$ cannot vanish simultaneously, for otherwise as a consequence of \eqref{disj4} we had $p^{\,\prime}_\pi(\xi)=s(h)$ and $p^{\,\prime}_\pi(\varphi)=s(h)$, but  which contradicted \eqref{disj1}. Let us assume  $(p^{\,\prime}_\pi(\xi)-s(h))z\not={\mathsf 0}$ first. Then, in view of the former relation of \eqref{disj4}, a partial isometry $w\in N^{\,\prime}$ with
\begin{subequations}\label{piso}
\begin{equation}\label{piso1}
w^*w=(p^{\,\prime}_\pi(\varphi)-s(h))z,\ ww^*\leq (p^{\,\prime}_\pi(\xi)-s(h))z
\end{equation}
exists. From this and since $z\in N\cap N^\prime$ holds, by easy inspection one infers that by
\begin{equation}\label{piso2}
u=w+(p^{\,\prime}_\pi(\varphi)-s(h))z^\perp+s(h)
\end{equation}
\end{subequations}
a partial isometry $u\in N^{\,\prime}$ obeying $u^*u=p^{\,\prime}_\pi(\varphi)$ is given. Let us define $\varphi'=u\varphi$ and $\xi'=\xi$. Then, $\varphi'\in {\mathcal S}_{\pi,M}(\nu)$. Also note that \eqref{piso}
implies  $us(h)=s(h)$ and owing to \eqref{disj0} we especially have that  $h^{\pi}_{\xi,\varphi}=h^{\pi*}_{\xi,\varphi}=h^{\pi}_{\varphi,\xi}$. We may conclude as follows:
\[
h^\pi_{\varphi',\xi'}=h^\pi_{\varphi,\xi}((\cdot)u)=h^{\pi}_{\xi,\varphi}((\cdot)u)=h((\cdot)us(h))=h\geq {\mathsf 0}
\]
Thus, by hermiticity, we have
\[
h'=h^\pi_{\xi',\varphi'}=h^\pi_{\varphi',\xi'}=h
\]
Also, in view of \eqref{piso1} and since $s(h')=s(h)$ holds, from the definition of $u$ in \eqref{piso2}
\[
p^{\,\prime}_\pi(\varphi')-s(h')=uu^*-s(h)=ww^*+(p^{\,\prime}_\pi(\varphi)-s(h))z^\perp
\]
is obtained. According to $p^{\,\prime}_\pi(\xi')-s(h')=p^{\,\prime}_\pi(\xi)-s(h)$, $z\in N\cap N^\prime$, \eqref{disj3} and by the last-mentioned relation of  \eqref{piso1}, from the previous one can conclude that
\[
(p^{\,\prime}_\pi(\varphi')-s(h'))\,(p^{\,\prime}_\pi(\xi')-s(h'))=ww^*\not={\mathsf 0}
\]
that is, \eqref{disj2} is seen.  In the contrary case, that is if $(p^{\,\prime}_\pi(\xi)-s(h))z={\mathsf 0}$ is fulfilled, by our preliminary preparations we can be sure that then  $$(p^{\,\prime}_\pi(\varphi)-s(h))z^\perp\not={\mathsf 0}$$ has to hold. Starting from this relation, the previously demonstrated line of reasoning analogously (by interchanging the r\^{o}les of $z$ with $z^\perp$ and $\xi$ with $\varphi$) also can be applied in order to construct implementing vectors $\xi'$ and $\varphi'$ with
\[
h'=h^\pi_{\xi',\varphi'}=h\geq {\mathsf 0}
\]
such that  \eqref{disj2} can be followed (but in contrast to the above, in this case $\xi$ is modified into $\xi'\not=\xi$ and $\varphi$  is kept constant,  $\varphi'=\varphi$). We omit an elaboration of the details, but instead will take for granted now that  non-disjointness of $\varrho_\nu$ and $\nu_\varrho$ in any case will imply existence of $\xi\in{\mathcal S}_{\pi,M}(\varrho)$, $\varphi\in{\mathcal S}_{\pi,M}(\nu)$ with $h\geq {\mathsf 0}$ such that
\begin{equation}\label{disj5}
{\mathsf 0}\not=(p^{\,\prime}_\pi(\xi)-s(h))\,(p^{\,\prime}_\pi(\varphi)-s(h))
\end{equation}
Based on $\xi,\varphi$ with \eqref{disj5} fulfilled, consider now for $\alpha\in {\mathbb R}$
\[
\xi'=\exp{{\mathsf{i}}\alpha}\,(p^{\,\prime}_\pi(\xi)-s(h))\xi+s(h)\xi
\]
Then, also $\xi' \in{\mathcal S}_{\pi,M}(\varrho)$ and obviously  $h^\pi_{\xi',\varphi}=h^\pi_{\xi,\varphi}=h$. On the other hand,  one has
\begin{equation}\label{disj6}
f^\pi_{\xi',\varphi}-f^\pi_{\xi,\varphi}=(\exp{{\mathsf{i}}\alpha}-1) f^\pi_{(p^{\,\prime}_\pi(\xi)-s(h))\xi,(p^{\,\prime}_\pi(\varphi)-s(h))\varphi}
\end{equation}
Note that \eqref{disj5} is equivalent to  $g=f^\pi_{(p^{\,\prime}_\pi(\xi)-s(h))\xi,(p^{\,\prime}_\pi(\varphi)-s(h))\varphi}\not={\mathsf 0}$. Hence, due to
\[
\re{f^\pi_{\xi',\varphi}}-\re{f^\pi_{\xi,\varphi}}=(\cos{\alpha} -1)\re{g}-(\sin{\alpha}) \im{g}
\]
$\alpha$ can be chosen such that $\re{f^\pi_{\xi',\varphi}}-\re{f^\pi_{\xi,\varphi}}\not={\mathsf 0}$. Since both $f_1=f^\pi_{\xi',\varphi}$ and $f_2=f^\pi_{\xi,\varphi}$ are functionals as in \eqref{unique3}, these can be taken to meet our demands.
\end{proof}
\begin{proof}[Proof of Theorem \ref{unique}]
Let $\pi$, $\varphi$ and $\xi$ be chosen as mentioned, and suppose that $\varrho_\nu$ and $\nu_\varrho$ are mutually disjoint. Then we have
\begin{equation}\label{disj1a}
c_\pi[p^{\,\prime}_\pi(\xi)-s(h)]\,c_\pi[p^{\,\prime}_\pi(\varphi)-s(h)]={\mathsf 0}
\end{equation}
Let $z=c_\pi[p^{\,\prime}_\pi(\xi)-s(h)]$. In view of \eqref{disj1a} then
\begin{equation}\label{disj1b}
z\geq p^{\,\prime}_\pi(\xi)-s(h),\ z^\perp\geq p^{\,\prime}_\pi(\varphi)-s(h)
\end{equation}
are fulfilled.
Let $f\in {\mathfrak F}(M|\nu,\varrho)$. According to Lemma \ref{repf} we then have
\begin{equation}\label{disj2a}
f(x)=\langle\pi(x)K\xi,\varphi\rangle_\pi,\ \forall x\in M
\end{equation}
with $K\in N^{\,\prime}$ satisfying Lemma \ref{repf}\,\eqref{repf2}--\eqref{repf3}. From \eqref{repf2} we conclude that
\[
\begin{split}
K=  p^{\,\prime}_\pi(\varphi) K p^{\,\prime}_\pi(\xi)=  (p^{\,\prime}_\pi(\varphi)-&s(h))K(p^{\,\prime}_\pi(\xi)-s(h))+s(h)K(p^{\,\prime}_\pi(\xi)-s(h))\\
& + (p^{\,\prime}_\pi(\varphi)-s(h))K s(h)+s(h)Ks(h)
\end{split}
\]
is fulfilled. By \eqref{repf3} we conclude that $s(h)Ks(h)=s(h)$ and
$$(p^{\,\prime}_\pi(\varphi)-s(h))K s(h)=(p^{\,\prime}_\pi(\varphi)-s(h))s(h)={\mathsf 0}$$
and
$$s(h))K(p^{\,\prime}_\pi(\xi)-s(h))=s(h))(p^{\,\prime}_\pi(\xi)-s(h))={\mathsf 0}$$
have to hold. Accordingly, the conclusion is that
\[
K=(p^{\,\prime}_\pi(\varphi)-s(h))K(p^{\,\prime}_\pi(\xi)-s(h))+s(h)
\]
Moreover, by \eqref{disj1b} with the central orthoprojection $z$ we can  conclude that
\[
\begin{split}
(p^{\,\prime}_\pi(\varphi)-s(h))K(p^{\,\prime}_\pi(\xi)-s(h))=z(p^{\,\prime}_\pi(\varphi)-s(h))K z^\perp (p^{\,\prime}_\pi(\xi)-s(h))={\mathsf 0}
\end{split}
\]
Hence, $K=s(h)$ has to be fulfilled, and \eqref{disj2a} then turns into
\begin{equation}\label{disj3a}
f(x)=\langle\pi(x)s(h)\xi,\varphi\rangle_\pi,\ \forall x\in M
\end{equation}
Finally, note that since the orthoprojection $s(h)$ belongs to $N^{\,\prime}$, for  $x\in M$ we have $$\langle\pi(x)s(h)^\perp\xi,\varphi\rangle_\pi=\langle\pi(x)s(h)^\perp\xi,s(h)^\perp\varphi\rangle_\pi=\langle \pi(x)\eta,\chi\rangle_\pi$$
with implementing vectors $\eta=s(h)^\perp\xi$,  $\chi=s(h)^\perp\varphi$ of $\varrho_\nu$, $\nu_\varrho$  respectively, see  \eqref{diffreteile2aa}. Hence, by disjointness of the latter, with the central orthoprojection $z$ from above $z\eta=\eta$ and $({\mathsf 1}-z) \chi=\chi$ follows. Hence $\langle \pi(x)\eta,\chi\rangle_\pi=0$, and thus accordingly    $$\langle\pi(x)s(h)^\perp\xi,\varphi\rangle_\pi=0$$ for any $x\in M$. By adding this to \eqref{disj3a} yields validity of the formula for $f$ as asserted in Theorem \ref{unique}\,\eqref{u3}. From this also uniqueness in sense of Theorem \ref{unique}\,\eqref{u2} follows.
To see the other way around, assume uniqueness in sense of Theorem \ref{unique}\,\eqref{u2} to be fulfilled. Then, for any two linear forms $f_1,f_2$ constructed in accordance with \eqref{unique3} we especially will find that $\re f_1=\re f_2$. According to the implication arising from Lemma \ref{disj} by negation then the disjointness of $\varrho_\nu$ and $\nu_\varrho$ will follow.
\end{proof}
Let  $\{\pi,{\mathcal H}_\pi\}$ be a unital $^*$-representation  of $M$  such that the $\pi$-fibres of the states $\nu$ and $\varrho$ both exist. In accordance with Theorem \ref{positiv1}, let $\varphi\in {\mathcal S}_{\pi,M}(\nu)$ and $\xi\in {\mathcal S}_{\pi,M}(\varrho)$ be fixed such that $d_B(M|\nu,\varrho)=\|\xi-\varphi\|_\pi$ be fulfilled. Let $N=\pi(M)^{\,\prime\prime}$. We consider the following set ${\mathcal U}_{\pi,M}(\varphi,\xi)$ of partial isometries in $N^{\,\prime}$:
\begin{equation}\label{piso3}
{\mathcal U}_{\pi,M}(\varphi,\xi)=\bigl\{u\in N^{\,\prime}:\ u^*u=p^{\,\prime}_\pi(\xi)\vee p^{\,\prime}_\pi(\varphi) \bigr\}
\end{equation}
Under certain conditions, ${\mathcal S}_{\pi,M}(\nu|\varrho)$ can be described in terms of ${\mathcal U}_{\pi,M}(\varphi,\xi)$.
\begin{corolla}\label{srho}
Suppose $\varrho_\nu$ and $\nu_\varrho$ to be mutually disjoint. Then, the following hold:
 \begin{subequations}\label{relsform}
 \begin{equation}\label{relsform1}
 {\mathcal S}_{\pi,M}(\nu|\varrho)=\bigl\{u\varphi:\ u\in {\mathcal U}_{\pi,M}(\varphi,\xi)\bigr\}
 \end{equation}
 \begin{equation}\label{relsform2}
 {\mathcal S}_{\pi,M}(\varrho|\nu)=\bigl\{u\xi:\ u\in {\mathcal U}_{\pi,M}(\varphi,\xi)\bigr\}
  \end{equation}
 \end{subequations}
\end{corolla}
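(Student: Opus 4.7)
The plan is to prove both set equalities simultaneously by constructing, for each preselected implementer, a suitable partial isometry in ${\mathcal U}_{\pi,M}(\varphi,\xi)$. Abbreviate $p=p^{\,\prime}_\pi(\varphi)$, $q=p^{\,\prime}_\pi(\xi)$, $h=h^\pi_{\xi,\varphi}$; because $\langle\xi,\varphi\rangle_\pi=F(M|\nu,\varrho)$, Lemma \ref{bas3}\,\eqref{bas3aa} together with Theorem \ref{bas5}\,\eqref{bas53} ensures $h\ge {\mathsf 0}$ with $s(h)\le p\wedge q$.

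The inclusions $\supseteq$ are routine: given $u\in{\mathcal U}_{\pi,M}(\varphi,\xi)$, the condition $u^*u=p\vee q\ge p$ gives $\langle\pi(y)u\varphi,u\varphi\rangle_\pi=\langle u^*u\pi(y)\varphi,\varphi\rangle_\pi=\nu(y)$, whence $u\varphi\in {\mathcal S}_{\pi,M}(\nu)$, and analogously $u\xi\in {\mathcal S}_{\pi,M}(\varrho)$. The computation $\langle u\xi,u\varphi\rangle_\pi=\langle u^*u\,\xi,\varphi\rangle_\pi=\langle\xi,\varphi\rangle_\pi=F(M|\nu,\varrho)$ together with Theorem \ref{bas5}\,\eqref{bas53} then puts $u\varphi\in {\mathcal S}_{\pi,M}(\nu|\varrho)$ and $u\xi\in {\mathcal S}_{\pi,M}(\varrho|\nu)$.

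For the opposite inclusions, pick $\varphi'\in {\mathcal S}_{\pi,M}(\nu|\varrho)$ together with $\psi'\in {\mathcal S}_{\pi,M}(\varrho)$ satisfying $\langle\psi',\varphi'\rangle_\pi=F(M|\nu,\varrho)$, and use \eqref{bas4} to write $\varphi'=v\varphi$, $\psi'=w\xi$ for partial isometries $v,w\in\pi(M)^{\,\prime}$ with $v^*v=p$, $w^*w=q$. The disjointness hypothesis activates Theorem \ref{unique}: ${\mathfrak F}(M|\nu,\varrho)$ is a singleton, and since $f^\pi_{\xi,\varphi}$ and $f^\pi_{\psi',\varphi'}$ both lie in it they coincide, i.e.\ $\langle\pi(x)v^*w\,\xi,\varphi\rangle_\pi=\langle\pi(x)\xi,\varphi\rangle_\pi$ for every $x\in M$. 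Thus $(v^*w-{\mathsf 1})\xi\perp[\pi(M)\varphi]=p{\mathcal H}_\pi$; since $pv^*=v^*$ (from $v=vp$) this reads $v^*w\,\xi=p\xi$. Disjointness forces $(p-s(h))(q-s(h))={\mathsf 0}$ (because $p-s(h)\le c_\pi[p-s(h)]$ is orthogonal to $c_\pi[q-s(h)]\ge q-s(h)$), hence $pq=s(h)$, and therefore $p\xi=pq\xi=s(h)\xi$, giving $v^*w\,\xi=s(h)\xi$. Since $v^*w,s(h)\in\pi(M)^{\,\prime}$ commute with every $\pi(x)$, the identity extends to $v^*w\,\pi(x)\xi=s(h)\pi(x)\xi$ for all $x\in M$, and by density to all of $[\pi(M)\xi]=q{\mathcal H}_\pi$; together with $v^*w=v^*wq$ (from $w=wq$) this yields $v^*w=s(h)$, and taking adjoints $w^*v=s(h)$ as well.

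Now set $u=vp+w(q-s(h))\in\pi(M)^{\,\prime}$. Expanding $u^*u$ and using $v^*w=w^*v=s(h)$ one gets $pv^*w(q-s(h))=s(h)(q-s(h))={\mathsf 0}$ and $(q-s(h))w^*vp=(q-s(h))s(h)={\mathsf 0}$, so $u^*u=p+(q-s(h))=p+q-s(h)=p\vee q$; therefore $u\in{\mathcal U}_{\pi,M}(\varphi,\xi)$. Moreover $(q-s(h))\varphi=qp\varphi-s(h)\varphi=s(h)\varphi-s(h)\varphi={\mathsf 0}$ yields $u\varphi=v\varphi=\varphi'$, and
\[
\|(v-w)s(h)\xi\|_\pi^2=\langle s(h)\xi,(p+q-2s(h))s(h)\xi\rangle_\pi=0
\]
(since $ps(h)=qs(h)=s(h)$) gives $vs(h)\xi=ws(h)\xi$, whence $u\xi=vs(h)\xi+w\xi-ws(h)\xi=w\xi=\psi'$. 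This single construction establishes both $\subseteq$-inclusions. The main obstacle is the upgrade from the functional uniqueness of Theorem \ref{unique} to the operator identity $v^*w=s(h)$; disjointness enters crucially through $pq=s(h)$, which turns the single-vector relation $v^*w\xi=p\xi$ into $v^*w\xi=s(h)\xi$, after which commutativity with $\pi(M)$ propagates the equality to all of ${\mathcal H}_\pi$.
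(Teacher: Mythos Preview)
Your proof is correct and arrives at exactly the same partial isometry $u=vp+w(q-s(h))$ as the paper, but the derivation of the key algebraic identities differs. The paper re-runs the Cauchy--Schwarz argument from Lemma~\ref{repf} with $v^*w$ in place of $K$ to obtain only $v^*w\,s(h)=s(h)$ (and the analogous relations involving $s(h')=s(h^\pi_{\psi',\varphi'})$), and then uses the full central-support orthogonality to establish the intertwining $vs(h)=s(h')v$, $ws(h)=s(h')w$, and finally $vs(h)=ws(h)$. You instead invoke Theorem~\ref{unique} directly: uniqueness of the element of ${\mathfrak F}(M|\nu,\varrho)$ gives $f^\pi_{\xi,\varphi}=f^\pi_{\psi',\varphi'}$, which combined with $pq=s(h)$ upgrades to the \emph{operator} identity $v^*w=s(h)$. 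This stronger statement makes the cross terms in $u^*u$ vanish trivially (no central-support argument needed at that step) and reduces $vs(h)\xi=ws(h)\xi$ to a two-line norm computation, bypassing the $s(h')$ analysis entirely. Your route is more economical precisely because it cashes in the already-proved Theorem~\ref{unique}; the paper's route is more self-contained in that it does not rely on that theorem. One cosmetic point: your phrase ``This single construction establishes both $\subseteq$-inclusions'' is correct but implicit---for \eqref{relsform2} one starts from an arbitrary $\psi''\in{\mathcal S}_{\pi,M}(\varrho|\nu)$, takes its witness $\varphi''$, and applies your construction to the pair $(\varphi'',\psi'')$.
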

\begin{proof}
Let $\varphi'\in {\mathcal S}_{\pi,M}(\nu)$ and $\xi'\in {\mathcal S}_{\pi,M}(\varrho)$ be another two  implementing vectors of $\nu, \varrho$ and obeying $d_B(M|\nu,\varrho)=\|\xi'-\varphi'\|_\pi$. The assertion requires to construct  $u\in {\mathcal U}_{\pi,M}(\varphi,\xi)$ such that  $\xi'=u\xi$ and $\varphi'=u\varphi$.
To this sake, let
$v,w\in N^{\,\prime}$ with
\begin{subequations}\label{srho1}
\begin{equation}\label{srho1a}
v^*v =p^{\,\prime}_\pi(\varphi),\ v v^*= p^{\,\prime}_\pi(\varphi'),\ \varphi'=v\varphi
\end{equation}
\begin{equation}\label{srho1b}
w^*w = p^{\,\prime}_\pi(\xi),\ w w^*= p^{\,\prime}_\pi(\xi'),\ \xi'=v\xi
\end{equation}
\end{subequations}
Then, from  $h=h^\pi_{\xi,\varphi}\geq {\mathsf 0}$,  $h'=h^\pi_{\xi',\varphi'}\geq {\mathsf 0}$  with $\|h\|_1=F(M|\nu,\varrho)=\|h'\|_1$  we get
\begin{subequations}\label{srho2}
\begin{equation}\label{srho2a}
h(v^*w)=h(w^*v)=\|h\|_1=F(M|\nu,\varrho)
\end{equation}
\begin{equation}\label{srho2b}
h'(v w^*)=h'(w v^*)=\|h'\|_1=F(M|\nu,\varrho)
\end{equation}
\end{subequations}
As in the proof of Lemma \ref{repf},  with $v^*w$ or $v w^*$ instead of $K$ in \eqref{unique5}, by starting from \eqref{srho2} the conclusions along the lines of \eqref{unique5}--\eqref{unique8} instead yield that
\begin{subequations}\label{srho3}
\begin{equation}\label{srho3a}
v^*w s(h)=w^*v s(h))=s(h)
\end{equation}
\begin{equation}\label{srho3b}
v w^* s(h')=w v^* s(h')=s(h')
\end{equation}
\end{subequations}
Remind that the supposition of disjointness between $\varrho_\nu$ and $\nu_\varrho$ reads
\[
c_\pi[p^{\,\prime}_\pi(\xi)-s(h)]\, c_\pi[p^{\,\prime}_\pi(\varphi)-s(h)]=c_\pi[p^{\,\prime}_\pi(\xi')-s(h)]\, c_\pi[p^{\,\prime}_\pi(\varphi')-s(h)]={\mathsf 0}
\]
Thus, especially we have orthogonality
\[  (p^{\,\prime}_\pi(\xi)-s(h))\, (p^{\,\prime}_\pi(\varphi)-s(h))=(p^{\,\prime}_\pi(\xi')-s(h'))\, (p^{\,\prime}_\pi(\varphi')-s(h'))={\mathsf 0}
\]
and which is equivalent to
\begin{equation}\label{srho4}
p^{\,\prime}_\pi(\xi)\,p^{\,\prime}_\pi(\varphi)=s(h),\  p^{\,\prime}_\pi(\xi')\,p^{\,\prime}_\pi(\varphi')=s(h')
\end{equation}
With the help of \eqref{srho4} from \eqref{srho3} we conclude as follows:
\[ v s(h)=v v^*ws(h)=p^{\,\prime}_\pi(\varphi') w s(h)=p^{\,\prime}_\pi(\varphi')p^{\,\prime}_\pi(\xi') w s(h)=s(h')w s(h) \]
\[
v^* s(h')= v^*v w^* s(h')=p^{\,\prime}_\pi(\varphi) w^* s(h')= p^{\,\prime}_\pi(\varphi)  p^{\,\prime}_\pi(\xi) w^* s(h')=s(h) w^* s(h')
\]
\begin{subequations}\label{srho5}
From this we infer
\begin{equation}\label{srho5a}
v s(h)=s(h')v
\end{equation}
Analogously, with respect to $w$ the conclusion is
\begin{equation}\label{srho5b}
w s(h)=s(h')w
\end{equation}
and by substituting the latter into  $ v s(h)= s(h')w s(h)$, we finally get that
\end{subequations}
\begin{equation}\label{srho6}
v s(h)=w s(h)
\end{equation}
Owing to the latter, and since \eqref{srho4} implies $(p^{\,\prime}_\pi(\xi)-s(h))$, $s(h)$ and $(p^{\,\prime}_\pi(\varphi)-s(h))$ to be mutually orthogonal orthoprojections summing up to $p^{\,\prime}_\pi(\xi)\vee p^{\,\prime}_\pi(\varphi)$, we have
\begin{equation}\label{srho7}
w (p^{\,\prime}_\pi(\xi)-s(h))+v p^{\,\prime}_\pi(\varphi)=w p^{\,\prime}_\pi(\xi)+ v (p^{\,\prime}_\pi(\varphi)-s(h))
\end{equation}
That is, in view of \eqref{srho1}, if $u=w (p^{\,\prime}_\pi(\xi)-s(h))+v p^{\,\prime}_\pi(\varphi)$ is defined, then $u\in N^{\,\prime}$ is a partial isometry with $u^*u=p^{\,\prime}_\pi(\xi)\vee p^{\,\prime}_\pi(\varphi)$ and obeying $u\xi=\xi'$ and $u\varphi=\varphi'$.
\end{proof}
\noindent
Remark that disjointness between $\varrho_\nu$ and $\nu_\varrho$ for $\nu, \varrho\in {\mathcal S}(M)$ in a trivial manner is achieved if $\varrho_\nu={\mathsf 0}$ or $\nu_\varrho={\mathsf 0}$ is occurring.
According to Lemma \ref{diffreteile},  e.g.~the case of $\varrho_\nu={\mathsf 0}$ equivalently means that $\mu={\mathsf 0}$ is the only positive linear form obeying $\mu\leq \varrho$ and $\mu\perp\nu$. We are going to give a characterization of these special cases.

Let  $\{\pi,{\mathcal H}_\pi\}$ be a unital $^*$-representation  of $M$  such that the $\pi$-fibres of the states $\nu$ and $\varrho$ both exist. According to Theorem \ref{positiv1} and Definition \ref{twice1}, there exists $\varphi\in {\mathcal S}_{\pi,M}(\nu|\varrho)$. In line with this, let $\xi\in {\mathcal S}_{\pi,M}(\varrho)$ be chosen as to obey
\begin{subequations}\label{usurr}
\begin{equation}\label{usurr1}
\langle\xi,\varphi\rangle_\pi=F(M|\nu,\varrho)
\end{equation}
As we then know,  over $\pi(M)^{\,\prime}$ the latter is equivalent to
\begin{equation}\label{usurr2}
h_{\xi,\varphi}^\pi\geq {\mathsf 0}
\end{equation}
\end{subequations}
Let  $\psi\in {\mathcal S}_{\pi,M}(\varrho)$, and be $|h_{\psi,\varphi}^\pi|((\cdot)v_{\psi,\varphi}^\pi)$ the polar decomposition of $h_{\psi,\varphi}^\pi$.
\begin{lemma}\label{idem}
For each $\psi\in {\mathcal S}_{\pi,M}(\varrho)$ and $\varphi\in {\mathcal S}_{\pi,M}(\nu|\varrho)$   the following hold true:
\begin{subequations}\label{idem0}
\begin{eqnarray}\label{idem1}
h_{\xi,\varphi}^\pi&=&|h_{\psi,\varphi}^\pi|;\\
\label{idem2}
s(h_{\xi,\varphi}^\pi)\,\xi&=&v_{\psi,\varphi}^{\pi*}\psi\,.
\end{eqnarray}
\end{subequations}
Thereby, $\xi$ can be any vector in ${\mathcal S}_{\pi,M}(\varrho)$  which is obeying \eqref{usurr1}.
\end{lemma}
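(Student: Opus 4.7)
The plan is to deduce \eqref{idem1} and \eqref{idem2} simultaneously from the uniqueness clause of the polar decomposition recalled in \eqref{sub0}, applied to $h^\pi_{\psi,\varphi}$. The first observation is that by Theorem \ref{bas5}\,\eqref{bas53} together with Lemma \ref{bas3}\,\eqref{bas3aa}, the hypothesis $\langle\xi,\varphi\rangle_\pi = F(M|\nu,\varrho)$ encodes precisely that $h^\pi_{\xi,\varphi}$ is a positive normal form on $\pi(M)^{\,\prime}$, with $\|h^\pi_{\xi,\varphi}\|_1=F(M|\nu,\varrho)$. Hence $h^\pi_{\xi,\varphi}$ is already a legitimate candidate for the absolute-value part of a polar decomposition.

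Next, since both $\psi,\xi\in{\mathcal S}_{\pi,M}(\varrho)$, formula \eqref{bas4} furnishes a partial isometry $v\in\pi(M)^{\,\prime}$ with $\psi=v\xi$, $v^*v=p_\pi^{\,\prime}(\xi)$ and $vv^*=p_\pi^{\,\prime}(\psi)$. For $y\in\pi(M)^{\,\prime}$ one immediately obtains
\[
h^\pi_{\psi,\varphi}(y)=\langle yv\xi,\varphi\rangle_\pi=h^\pi_{\xi,\varphi}(yv).
\]
The delicate point is that the initial projection $v^*v=p_\pi^{\,\prime}(\xi)$ is in general strictly larger than the support $q:=s(h^\pi_{\xi,\varphi})$, so this first factorization does not yet conform to the normalization under which the polar decomposition is unique. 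Since $h^\pi_{\xi,\varphi}$ is positive, one has $h^\pi_{\xi,\varphi}((\cdot))=h^\pi_{\xi,\varphi}((\cdot)\,q)$, while $q\leq p_\pi^{\,\prime}(\xi)$ (an immediate consequence of $h^\pi_{\xi,\varphi}((\cdot))=h^\pi_{\xi,\varphi}((\cdot)\,p_\pi^{\,\prime}(\xi))$, which in turn follows from $\xi=p_\pi^{\,\prime}(\xi)\xi$). Therefore $vq\in\pi(M)^{\,\prime}$ is a partial isometry with $(vq)^*(vq)=qv^*vq=q$, and substitution yields
\[
h^\pi_{\psi,\varphi}=h^\pi_{\xi,\varphi}\bigl((\cdot)\,vq\bigr), \qquad (vq)^*(vq)=s(h^\pi_{\xi,\varphi}).
\]

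This is now exactly the normalization demanded in \eqref{sub0}, whose uniqueness clause therefore forces
\[
|h^\pi_{\psi,\varphi}|=h^\pi_{\xi,\varphi}\quad\text{and}\quad v^\pi_{\psi,\varphi}=vq,
\]
the first identity being \eqref{idem1}. Assertion \eqref{idem2} then follows by a direct unwinding: $v^{\pi*}_{\psi,\varphi}\psi=(vq)^*\psi=qv^*v\xi=q\,p_\pi^{\,\prime}(\xi)\,\xi=q\xi=s(h^\pi_{\xi,\varphi})\xi$. The main obstacle along the way is precisely the trimming of $v$ to $vq$: without this bookkeeping step the naive factorization $h^\pi_{\xi,\varphi}((\cdot)v)$ has the wrong initial projection and the uniqueness clause of the polar decomposition cannot be invoked; once the trimming is in place, everything else is formal.
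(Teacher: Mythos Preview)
Your proof is correct and follows essentially the same route as the paper: write $\psi=v\xi$ via \eqref{bas4}, trim the partial isometry by the support $q=s(h^\pi_{\xi,\varphi})$ so that $(vq)^*(vq)=q$, and invoke uniqueness of the polar decomposition \eqref{sub0} to identify both $|h^\pi_{\psi,\varphi}|$ and $v^\pi_{\psi,\varphi}$. The only cosmetic point is that the extra claim $vv^*=p_\pi^{\,\prime}(\psi)$, while true, is neither needed nor justified in your argument---you only use $v^*v=p_\pi^{\,\prime}(\xi)$.
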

\begin{proof}
We have $\psi= w\xi$, with $w\in \pi(M)^{\,\prime}$, $w^*w=p_\pi^{\,\prime}(\xi)$. For $h=h_{\xi,\varphi}^\pi$, since $s(h)\leq p_\pi^{\,\prime}(\xi)$ holds, $v=w s(h)$ is a partial isometry with  $v^*v=s(h)$ and
$$
h_{\psi,\varphi}^\pi=h((\cdot)w)=h((\cdot)v)
$$
  In view of \eqref{usurr2} and $w^*\psi=\xi$,  uniqueness of the polar decomposition implies  $h=|h_{\psi,\varphi}^\pi|,\,v=v_{\psi,\varphi}^\pi, \,s(h)\xi=v^*\psi$. This is \eqref{idem0}.
\end{proof}
\begin{corolla}\label{uniqueimp}
Suppose $\nu,\varrho\in {\mathcal S}(M)$ and be $\{\pi,{\mathcal H}_\pi\}$  a unital $^*$-representation  of $M$  such that the $\pi$-fibres of $\nu$ and $\varrho$ both exist. The following assertions are mutually equivalent:
\begin{enumerate}
\item\label{uniqueimp0}
$\mu={\mathsf 0}$ is the only positive linear form $\mu$ over $M$ obeying $\mu\leq \varrho$ and $\mu\perp\nu$;
\item\label{uniqueimp2}
$\xi\in {\mathcal S}_{\pi,M}(\varrho)$ with $\langle\xi,\varphi\rangle_\pi=F(M|\nu,\varrho)$ is uniquely determined by $\varphi\in {\mathcal S}_{\pi,M}(\nu|\varrho)$;
\item\label{uniqueimp3}
$\xi=v_{\psi,\varphi}^{\pi *}\psi$, for given $\varphi\in {\mathcal S}_{\pi,M}(\nu|\varrho)$, $\psi\in {\mathcal S}_{\pi,M}(\varrho)$ and $\xi$ as in  \eqref{uniqueimp2};
\item\label{uniqueimp3a}
$p_\pi^{\,\prime}(\xi)=s(h_{\xi,\varphi}^\pi)$, with $\xi$ as in  \eqref{uniqueimp2}.
\end{enumerate}
Moreover, by each of the above conditions the following are implied to be fulfilled:
\begin{enumerate}
\setcounter{enumi}{4}
\item\label{uniqueimp1}
${\mathcal S}_{\pi,M}(\nu)={\mathcal S}_{\pi,M}(\nu|\varrho)$;
\item\label{uniqueimp1a}
${\mathcal S}_{\pi,M}(\nu)\ni \varphi\longmapsto \xi(\varphi)\in {\mathcal S}_{\pi,M}(\varrho)$ is continuous, with $\xi=\xi(\varphi)$ as in \eqref{uniqueimp2}.
\end{enumerate}
\end{corolla}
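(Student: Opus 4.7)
The plan is to close the logical loop $(1)\Rightarrow(4)\Rightarrow(3)\Rightarrow(2)\Rightarrow(1)$, and then to deduce the two companion statements $(5)$ and $(6)$ from any of the equivalent conditions. The key observation underpinning every step is that by \eqref{diffreteile1} the vector $s(h)^\perp\xi$ (with $h=h^\pi_{\xi,\varphi}$) is an implementing vector of $\varrho_\nu$, and that by Lemma \ref{diffreteile} the form $\varrho_\nu$ is the largest $\mu\in M^*_+$ subject to $\mu\leq\varrho$ and $\mu\perp\nu$. Consequently, condition $(1)$ is equivalent to $\varrho_\nu={\mathsf 0}$, i.e.\ $s(h)^\perp\xi=0$.

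For $(1)\Leftrightarrow(4)$: the condition $s(h)^\perp\xi=0$ amounts to $p_\pi^{\,\prime}(\xi)\leq s(h)$, which combined with the reverse inclusion $s(h)\leq p_\pi^{\,\prime}(\xi)$ from \eqref{sub1} yields $(4)$. For $(4)\Rightarrow(3)$: Lemma \ref{idem} applied with an arbitrary $\psi\in{\mathcal S}_{\pi,M}(\varrho)$ gives $s(h)\xi=v^{\pi*}_{\psi,\varphi}\psi$, and $(4)$ forces $\xi=s(h)\xi=v^{\pi*}_{\psi,\varphi}\psi$. The implication $(3)\Rightarrow(2)$ is trivial since the right-hand side of $(3)$ is a fixed vector depending only on the chosen data $\psi,\varphi$. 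For $(2)\Rightarrow(4)$ I argue by contraposition: if $s(h)^\perp\xi\neq 0$, I construct the rival vector
\begin{equation*}
\xi'=s(h)\xi+e^{i\alpha}\bigl(p_\pi^{\,\prime}(\xi)-s(h)\bigr)\xi,\qquad\alpha\in(0,2\pi).
\end{equation*}
Since $s(h)$ and $p_\pi^{\,\prime}(\xi)-s(h)$ are mutually orthogonal orthoprojections in $\pi(M)^{\,\prime}$ summing to $p_\pi^{\,\prime}(\xi)$, a direct expansion using that $h$ is supported on $s(h)$ yields $\langle\pi(x)\xi',\xi'\rangle_\pi=\langle\pi(x)\xi,\xi\rangle_\pi=\varrho(x)$ and $\langle\xi',\varphi\rangle_\pi=h({\mathsf 1})=F(M|\nu,\varrho)$, while $\xi'\neq\xi$, contradicting $(2)$.

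The auxiliary assertion $(5)$ then follows from Corollary \ref{srho}: under $(1)$, $\varrho_\nu$ and $\nu_\varrho$ are trivially mutually disjoint (the former being zero), so that ${\mathcal S}_{\pi,M}(\nu|\varrho)=\{u\varphi:u\in{\mathcal U}_{\pi,M}(\varphi,\xi)\}$; but $(4)$ gives $p_\pi^{\,\prime}(\xi)=s(h)\leq p_\pi^{\,\prime}(\varphi)$, so that $p_\pi^{\,\prime}(\xi)\vee p_\pi^{\,\prime}(\varphi)=p_\pi^{\,\prime}(\varphi)$, and this parametrisation reproduces all of ${\mathcal S}_{\pi,M}(\nu)$ through \eqref{bas4}. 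For the continuity $(6)$ I would exploit formula $(3)$: choosing the auxiliary vector $\psi=\xi(\varphi)$, the form $h_{\xi(\varphi),\varphi'}^\pi$ depends Lipschitz-continuously on $\varphi'$ and coincides with the positive form $h$ at $\varphi'=\varphi$, whence $\xi(\varphi')=v^{\pi*}_{\xi(\varphi),\varphi'}\xi(\varphi)$ should vary continuously in a neighbourhood of $\varphi$ via a continuity argument for the polar decomposition in this distinguished setting.

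The main obstacle I anticipate is precisely this last step, the continuity claim $(6)$. The uniqueness of $\xi(\varphi)$ combined with the triangle inequality readily forces $\|\xi(\varphi_n)-\varphi\|_\pi\to d_B(M|\nu,\varrho)$ whenever $\varphi_n\to\varphi$, so that $(\xi(\varphi_n))$ is a minimising sequence for the Bures infimum over ${\mathcal S}_{\pi,M}(\varrho)$ with target $\varphi$; what remains delicate is to upgrade this to norm convergence of the full sequence to $\xi(\varphi)$. One natural route is to combine weak subsequential compactness with the constancy $\|\xi(\varphi_n)\|_\pi=1$ and the uniqueness of the minimiser, but this requires sufficient control of the weak limits to ensure they still implement $\varrho$; the alternative polar-decomposition route requires analysing how $s(h^\pi_{\xi(\varphi),\varphi'})$ relates to $s(h)$ as $\varphi'\to\varphi$.
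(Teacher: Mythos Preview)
Your loop $(1)\Leftrightarrow(4)\Rightarrow(3)\Rightarrow(2)\Rightarrow(4)$ is correct and matches the paper's argument closely; the contrapositive construction in $(2)\Rightarrow(4)$ is the same as the paper's (which takes the special phase $\alpha=\pi$), and $(4)\Leftrightarrow(3)$ via Lemma~\ref{idem} is exactly how the paper proceeds. One tiny correction: the inclusion $s(h)\leq p_\pi^{\,\prime}(\xi)$ comes from \eqref{sub2} (or from $h=h^*$ together with $h(z)=h(zp_\pi^{\,\prime}(\xi))$), not from \eqref{sub1}. Your derivation of $(5)$ via Corollary~\ref{srho} is a legitimate shortcut; the paper instead checks directly that $p_\pi^{\,\prime}(\xi)=s(h)\leq p_\pi^{\,\prime}(\varphi)$ forces $u\xi\in{\mathcal S}_{\pi,M}(\varrho)$ and $\langle u\xi,u\varphi\rangle_\pi=F$ for every partial isometry $u$ with $u^*u=p_\pi^{\,\prime}(\varphi)$.

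The genuine gap is exactly where you locate it, namely $(6)$. Your polar-decomposition route is the right one, and the paper carries it through as follows. Fix any $\psi\in{\mathcal S}_{\pi,M}(\varrho)$ once and for all (independent of $n$), and set $h_n=h^\pi_{\psi,\varphi_n}$, $h=h^\pi_{\psi,\varphi}$, with polar isometries $v_n,v$. Then $\xi(\varphi_n)=v_n^*\psi$ by $(3)$, and $\|h_n-h\|_1\to 0$. The crucial fact is the strong convergence $v_n^*v\to s(|h|)$; once this is known, apply it to $\xi=\xi(\varphi)$ and use $v\xi=\psi$ (which holds because $(4)$ gives $v^*v=s(|h|)=p_\pi^{\,\prime}(\xi)$, so $\|v\xi\|=1$ forces $vv^*\psi=\psi$) to conclude $v_n^*\psi=v_n^*v\xi\to s(|h|)\xi=\xi$. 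To obtain $v_n^*v\to s(|h|)$ strongly, the paper passes via a $^*$-isomorphism $\tau$ of $R=\pi(M)^{\,\prime}$ to a concrete $vN$-algebra in which $|h|$ is implemented by a vector $\chi$ with $[\tau(R)'\chi]=\tau(s(|h|)){\mathcal H}$; there, the known continuity of the polar decomposition under norm convergence of normal functionals (Takesaki, \emph{Theory of Operator Algebras~I}, Prop.~4.10) yields $\tau(v_n^*v)\chi\to\chi$, and uniform boundedness together with $\tau(v)s(|h^\tau|)=\tau(v)$ propagates this to strong convergence on all of $s(|h^\tau|){\mathcal H}$, hence to $\tau(v_n^*v)\to\tau(s(|h|))$ strongly, which pulls back through the $^*$-isomorphism. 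Your weak-compactness alternative would also work in principle, but controlling that weak subsequential limits remain in ${\mathcal S}_{\pi,M}(\varrho)$ is not automatic (the fibre is not weakly closed), so the polar-decomposition argument is cleaner.
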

\begin{proof}
By Definition \ref{twice1}/Theorem \ref{positiv1}, we find
$\varphi\in {\mathcal S}_{\pi,M}(\nu|\varrho)$, $\xi\in {\mathcal S}_{\pi,M}(\varrho)$ with
\begin{equation}\label{uniqueimp4}
\langle\xi,\varphi\rangle_\pi=F(M|\nu,\varrho)
\end{equation}
Suppose \eqref{uniqueimp0} is fulfilled. Then $\varrho\not\perp\nu$ has to hold. Thus  $F(M|\nu,\varrho)\not=0$, by Corollary \ref{subadd}\,\eqref{subadd2}.
Let $\xi'\in {\mathcal S}_{\pi,M}(\varrho)$ be another vector obeying $\langle\xi',\varphi\rangle_\pi=F(M|\nu,\varrho)$. Then, $\xi'=w\xi$, with $w\in \pi(M)^{\,\prime}$ obeying  $w^*w=p_\pi^{\,\prime}(\xi)$, and due to $h=h_{\xi,\varphi}^\pi\geq {\mathsf 0}$  we have
$$
h(w)=h(w^*)=F(M|\nu,\varrho)=\|h\|_1\not=0
$$
That is, we have arrived at an analogous to \eqref{unique5} relation, which has to be fulfilled under the same assumptions on $\nu$ and $\varrho$, but now with $w$ instead of $K$ there. Accordingly, as in the proof of Lemma \ref{repf}, we will arrive at the same conclusion \eqref{unique8} from there, but reading now in terms of $w$ and telling us that
\begin{equation}\label{unique9}
ws(h)=w^*s(h)=s(h)
\end{equation}
By Lemma \ref{diffreteile}, \eqref{uniqueimp0}
is the same as $\varrho_\nu={\mathsf 0}$. Hence $p^{\,\prime}_\pi(\xi)=s(h)$, and by \eqref{unique9} $$\xi'=w\xi=wp^{\,\prime}_\pi(\xi)\xi=ws(h)\xi=s(h)\xi=\xi$$
and from $s(h)\leq p_\pi^{\,\prime}(\varphi)$ then $s(h)=p_\pi^{\,\prime}(\xi)\leq p_\pi^{\,\prime}(\varphi)$
is seen to hold. The former yields uniqueness \eqref{uniqueimp2} of  $\xi\in{\mathcal S}_{\pi,M}(\varrho)$ satisfying \eqref{uniqueimp4} to the given $\varphi\in{\mathcal S}_{\pi,M}(\nu)$, whereas the last-mentioned estimate includes that \eqref{uniqueimp3a} is fulfilled, and that, for any partial isometry $u\in \pi(M)^{\,\prime}$ obeying $u^*u=p_\pi^{\,\prime}(\varphi)$, $\xi'=u\xi$ is an implementing vector of $\varrho$ satisfying $\langle\xi',u\varphi\rangle_\pi=F(M|\nu,\varrho)$, too. In view of \eqref{bas4} this is \eqref{uniqueimp1}. Also, by Lemma \ref{idem}, see especially \eqref{idem2}, formula \eqref{uniqueimp3} is equivalent to \eqref{uniqueimp3a}. Thus, the implications \eqref{uniqueimp0}$\Rightarrow$\eqref{uniqueimp2}, \eqref{uniqueimp0}$\Rightarrow$\eqref{uniqueimp3a}$\Rightarrow$\eqref{uniqueimp1} and \eqref{uniqueimp3}$\Leftrightarrow$\eqref{uniqueimp3a} are seen.

Suppose \eqref{uniqueimp2}, that is $\xi\in {\mathcal S}_{\pi,M}(\varrho)$ with $\langle\xi,\varphi\rangle_\pi=F(M|\nu,\varrho)$ is uniquely determined. For $h=h_{\xi,\varphi}^\pi$ then $h\geq {\mathsf 0}$, with $p_\pi^{\,\prime}(\xi)\geq s(h)$, $p_\pi^{\,\prime}(\varphi)\geq s(h)$. Let $z=p_\pi^{\,\prime}(\xi)-s(h)$, and  $$\xi^{\,\prime}=-z\xi+s(h)\xi$$ Obviously, $\xi^{\,\prime}\in {\mathcal S}_{\pi,M}(\varrho)$ and $\langle \xi^{\,\prime},\varphi\rangle_\pi=\langle \xi,\varphi\rangle_\pi=F(M|\nu,\varrho)$. By uniqueness $\xi^{\,\prime}=\xi$ has to hold. Hence $\xi^{\,\prime}-\xi=-2 z\xi={\mathsf 0}$. Due to $z\leq p_\pi^{\,\prime}(\xi)$, $z={\mathsf 0}$ follows. Thus $p_\pi^{\,\prime}(\xi)=s(h)$, which is \eqref{uniqueimp3a}. Also note that from this $s(h)^\perp\xi={\mathsf 0}$ follows. Hence $\varrho_\nu={\mathsf 0}$, by \eqref{diffreteile1}. This is  \eqref{uniqueimp0}. Hence,  \eqref{uniqueimp2}$\Rightarrow$\eqref{uniqueimp3a}$\Rightarrow$\eqref{uniqueimp0} holds. From this and the above we may summarize that \eqref{uniqueimp0}, \eqref{uniqueimp2}, \eqref{uniqueimp3} and \eqref{uniqueimp3a} have to be mutually equivalent, and all imply \eqref{uniqueimp1}.

Note that according to all that then the mapping in \eqref{uniqueimp1a} and obeying \eqref{uniqueimp4} with $\xi=\xi(\varphi)$ for each $\varphi\in {\mathcal S}_{\pi,M}(\nu)$ exists and is uniquely determined. We are going to prove continuity. To this sake, suppose $\{\varphi_n\}\subset {\mathcal S}_{\pi,M}(\nu)$ with $\lim_{n\to\infty}\varphi_n=\varphi$.
For the following, let $h=h_{\psi,\varphi}^\pi$, $v=v_{\psi,\varphi}^\pi$ and $h_n=h_{\psi,\varphi_n}^\pi$, $v_n=v_{\psi,\varphi_n}^\pi$ for each $n\in {{\mathbb{N}}}$. Then, $h=\lim_{n\to\infty} h_n$ in norm in the dual space of $R=\pi(M)^{\,\prime}$. From this
\begin{equation}\label{vvh}
\lim_{n\to\infty} v_n^*v=s(|h|)
\end{equation}
with respect to the strong operator topology on $R$ follows. This then will imply \eqref{uniqueimp1a} to be fulfilled. In fact, by definition of $h$ and $v$ and owing to \eqref{uniqueimp3a} and \eqref{idem1}, for $\xi=\xi(\varphi)$  we have $\xi=v^*\psi$ and $v^*v=s(|h|)=p_\pi^{\,\prime}(\xi)$. Hence, $v v^*\psi=v\xi\in {\mathcal S}_{\pi,M}(\varrho)$ is a unit vector. Thus $\psi=v v^*\psi=v\xi$ holds.    Hence, application of \eqref{vvh} to the vector $\xi$ then implies $\lim_{n\to\infty} v_n^*\psi=s(|h|)\xi=\xi$. Thus, provided \eqref{vvh} is valid, then  due to $\xi(\varphi_n)=v_n^*\psi$ from the previous $\lim_{n\to\infty} \xi(\varphi_n)=\xi(\varphi)$ will follow, that is, continuity  \eqref{uniqueimp1a} will hold.\\
We are going to show that \eqref{vvh} is valid. To see this, first note that $R$ can be seen to be  $^*$-isomorphic to some $vN$-algebra $\tau(R)$ acting over some Hilbert space ${\mathcal H}$ such that $|h|$ in respect of this $^*$-isomorphism $\tau$ is implementable by a vector $\chi$, $|h|(z)=\langle\tau(z)\chi,\chi\rangle_\pi$, for each $z\in R$. For each bounded linear form $f$ over $R$, let $f^\tau$ be the unique linear form over $\tau(R)$ such that $f(z)=f^\tau(\tau(z))$, for any $z\in R$. Then $\lim_{n\to\infty} h_n^\tau=h^\tau$ in norm, with $|h^\tau|=\langle(\cdot)\chi,\chi\rangle_\pi$ on $\tau(R)$. Note that  $[\tau(R)^{\,\prime}\chi]=s(|h^\tau|)=\tau(s(|h|))$. Since $v^\tau=\tau(v)$ and $v_n^\tau=\tau(v_n)$ are the partial isometries of the polar decompositions of $h^\tau$ and $h_n^\tau$, respectively, the arguments raised in \cite[Proof of Proposition 4.10.]{Take:79} yield that
$
\lim_{n\to\infty} \tau(v_n^*v)\chi=\chi
$.
Hence  $\lim_{n\to\infty}  \tau(v_n^*v) x\chi=x\chi$ for any $x\in \tau(R)^{\,\prime}$ follows. By uniform boundedness of $\{\tau(v_n^*v)\}$ and by $\tau(v) s(|h^\tau|)=\tau(v)$ from this $\lim_{n\to\infty}  \tau(v_n^*v) =s(|h^\tau|)=\tau(s(|h|))$ (strongly) is obtained. Since $\tau$ is a $^*$-isomorphism of $vN$-algebras \eqref{vvh} follows.
\end{proof}

\subsection{Bures distance and Radon-Nikodym theorems}\label{rn}
Start with a notion extending the term `absolute continuity' from measures to states on ${\mathsf C}^*$-algebras.
\begin{definition}\label{absstet}
Let $M$ be a unital ${\mathsf C}^*$-algebra.
For given states $\nu,\varrho\in {\mathcal S}(M)$, $\varrho$ will be said to be {\em absolutely continuous} in respect of $\nu$, in sign $\varrho\dashv \nu$, if for any bounded net $\bigl\{x_\alpha\bigr\}\subset M$ with $\lim_\alpha \nu(x_\alpha^*x_\alpha)=0$ always
$\lim_\alpha \varrho(x_\alpha^*x_\alpha)=0$ follows.
\end{definition}
Obviously, the binary relation $\dashv$ of the absolute continuity over ${\mathcal S}(M)$ is reflexive and transitive, thus $\{{\mathcal S}(M), \dashv\}$ is a preordered set. By $\varrho \dashv\vdash \nu$ the equivalence relation over ${\mathcal S}(M)$  which is coming along with $\dashv$  will be notified and which has the meaning of `mutual absolute continuity' between two states. That is, for $\varrho,\nu\in {\mathcal S}(M)$ the notion  $\varrho \dashv\vdash \nu$ will mean  that $\varrho \dashv \nu$ and $ \nu\dashv\varrho$ both simultaneously are fulfilled.
\subsubsection{Radon-Nikodym theorems in Bures geometry}\label{RNBG}
Now, in conjunction with  $\dashv$ an intimate relationship to Bures geometry appears.
\begin{theorem}\label{suppabsstet}
For states $\nu,\varrho\in {\mathcal S}(M)$ and a unital $^*$-representation $\{\pi,{\mathcal H}_\pi\}$ such that the $\pi$-fibres of $\nu$ and $\varrho$ both are non-void, the following are mutually equivalent:
\begin{enumerate}
  \item \label{supp1}
  $\varrho\dashv \nu$;
  \item \label{supp2}
  $s(\varrho_\pi)\leq s(\nu_\pi)$;
   \item \label{supp3}
   for $\varphi\in {\mathcal S}_{\pi,M}(\nu)$ there is $\xi\in {\mathcal S}_{\pi,M}(\varrho)$ such that  $\xi=A\varphi$, with  densely defined, positive self-adjoint linear operator $A$ affiliated with $\pi(M)^{\,\prime\prime}$ and $s(\nu_\pi)A=A s(\nu_\pi)$.
\end{enumerate}
Thereby, $\xi\in {\mathcal S}_{\pi,M}(\varrho)$ is uniquely determined by $d_B(M|\varrho,\nu)=\|\xi-\varphi\|_\pi$.
\end{theorem}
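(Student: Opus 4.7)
The plan is to set up the cycle $(1)\Rightarrow(2)\Rightarrow(3)\Rightarrow(2)$ and to extract the uniqueness clause from Corollary~\ref{uniqueimp}. For $(1)\Rightarrow(2)$, suppose $\varrho\dashv\nu$ but $s(\varrho_\pi)\not\leq s(\nu_\pi)$; then $\varrho_\pi(s(\nu_\pi)^\perp)>0$, and Kaplansky density produces a norm-bounded net $\{x_\alpha\}\subset M_+$ with $\pi(x_\alpha)\to s(\nu_\pi)^\perp$ strongly, so that $\nu(x_\alpha^2)=\nu_\pi(\pi(x_\alpha)^2)\to 0$ while $\varrho(x_\alpha^2)\to\varrho_\pi(s(\nu_\pi)^\perp)>0$, contradicting $(1)$. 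Conversely, for any norm-bounded net $\{x_\alpha\}\subset M$ with $\nu(x_\alpha^*x_\alpha)\to 0$, normality of $\nu_\pi$ yields $\nu_\pi(\pi(x_\alpha^*x_\alpha))\to 0$, and normality of $\varrho_\pi$ combined with $s(\varrho_\pi)\leq s(\nu_\pi)$ forces $\varrho_\pi(\pi(x_\alpha^*x_\alpha))\to 0$.

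Assuming $(2)$, I verify the hypothesis of Corollary~\ref{uniqueimp}\,\eqref{uniqueimp0}: any $\mu\in M_+^*$ with $\mu\leq\varrho$ and $\mu\perp\nu$ has $s(\mu_\pi)\leq s(\varrho_\pi)\leq s(\nu_\pi)$ and simultaneously $s(\mu_\pi)s(\nu_\pi)={\mathsf 0}$, forcing $\mu={\mathsf 0}$, i.e.~$\varrho_\nu={\mathsf 0}$. By Corollary~\ref{uniqueimp}, for the given $\varphi\in\mathcal{S}_{\pi,M}(\nu)$ there is a unique $\xi\in\mathcal{S}_{\pi,M}(\varrho)$ with $h=h_{\xi,\varphi}^\pi\geq{\mathsf 0}$ and $\langle\xi,\varphi\rangle_\pi=F(M|\nu,\varrho)$, which by Theorem~\ref{bas5}\,\eqref{bas53} is the same as $d_B(M|\nu,\varrho)=\|\xi-\varphi\|_\pi$. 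Moreover, Corollary~\ref{uniqueimp}\,\eqref{uniqueimp3a} combined with \eqref{sub1} gives $p_\pi^{\,\prime}(\xi)=s(h)\leq p_\pi^{\,\prime}(\varphi)$, so in particular $\xi\in p_\pi^{\,\prime}(\varphi)\mathcal{H}_\pi=[\pi(M)\varphi]$.

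To produce the operator $A$ of $(3)$, observe that the vector states $\omega_\varphi^{\,\prime}(z)=\langle z\varphi,\varphi\rangle_\pi$ and $\omega_\xi^{\,\prime}(z)=\langle z\xi,\xi\rangle_\pi$ on $\pi(M)^{\,\prime}$ have supports $p_\pi^{\,\prime}(\varphi)$ and $p_\pi^{\,\prime}(\xi)$ respectively, so the inequality $p_\pi^{\,\prime}(\xi)\leq p_\pi^{\,\prime}(\varphi)$ yields absolute continuity $\omega_\xi^{\,\prime}\dashv\omega_\varphi^{\,\prime}$; in particular $z\varphi=0$ implies $z\xi=0$ for every $z\in\pi(M)^{\,\prime}$. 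Define $A_0$ on the dense subspace $\pi(M)^{\,\prime}\varphi\subset p_\pi(\varphi)\mathcal{H}_\pi$ by $A_0(z\varphi)=z\xi$; this is well-defined by the implication just stated, and the identities
\[
\langle A_0(z_1\varphi),z_2\varphi\rangle_\pi=h(z_2^*z_1),\qquad \langle A_0(z\varphi),z\varphi\rangle_\pi=h(z^*z)\geq 0,
\]
coupled with hermiticity of the positive linear form $h$, show that $A_0$ is symmetric and nonnegative. Extending $A_0$ by zero on $s(\nu_\pi)^\perp\mathcal{H}_\pi=p_\pi(\varphi)^\perp\mathcal{H}_\pi$ and passing to the Friedrichs extension produces a positive self-adjoint $A$; the intertwining $A_0 z=zA_0$ built into the definition carries over to the closure, so that $A$ is affiliated with $\pi(M)^{\,\prime\prime}=\pi(M)^{\,\prime\,\prime}$, while $As(\nu_\pi)=s(\nu_\pi)A=A$ by construction. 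Taking $z={\mathsf 1}$ gives $A\varphi=\xi$ as required.

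Finally, $(3)\Rightarrow(2)$ is immediate: if $\xi=A\varphi$ with $[A,s(\nu_\pi)]=0$ then $\xi=As(\nu_\pi)\varphi=s(\nu_\pi)A\varphi=s(\nu_\pi)\xi$, so $p_\pi(\xi)=s(\varrho_\pi)\leq s(\nu_\pi)$; compatibility with the Bures-optimal $\xi$ of the previous paragraph is checked via $h_{\xi,\varphi}^\pi(z^*z)=\|zA^{1/2}\varphi\|_\pi^2\geq 0$ using the spectral calculus for $A$ and the fact that $A^{1/2}$ again commutes with $\pi(M)^{\,\prime}$. The main obstacle is the third paragraph: specifically, the bookkeeping needed to verify that the Friedrichs closure remains affiliated with $\pi(M)^{\,\prime\prime}$ (the commutation must be promoted from the core $\pi(M)^{\,\prime}\varphi$ to the whole Hilbert space via the spectral projections of $A$) and that $\varphi$ lies in the domain of $A$ so that $A\varphi=\xi$ makes literal sense — everything else in the argument reduces to either Kaplansky density or a direct invocation of the earlier corollary.
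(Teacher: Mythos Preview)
Your proof is correct and follows essentially the same route as the paper; the ``main obstacle'' you flag is precisely what the paper isolates into the appendix as Lemma~\ref{affi} (the Friedrichs extension of a positive symmetric operator invariant under conjugation by ${\mathcal U}(N')$ is affiliated with $N$) and Lemma~\ref{affipos} (the equivalence $h_{\xi,\varphi}^\pi\geq 0 \Leftrightarrow p(\varphi)\xi=A\varphi$ for positive self-adjoint $A\,\eta\,N$), so your $A_0(z\varphi)=z\xi$ is exactly the operator constructed there, and $\varphi\in\mathcal D(A_0)\subset\mathcal D(A)$ is automatic. One small technical gap: in your $(1)\Rightarrow(2)$, Kaplansky density gives a net bounded in $\pi(M)$, not in $M$; the paper fixes this by noting $\|\pi(w)\|=\inf_{u\in\ker\pi}\|w+u\|$ and perturbing each $w_\alpha$ by an element of $\ker\pi$ to force $\|x_\alpha\|\leq 2$.
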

\begin{proof}
Suppose \eqref{supp2}. Let $\{x_\alpha\}\subset M$ be a bounded net with $\lim_\alpha \nu(x_\alpha^*x_\alpha)=0$. Then, owing to $\nu(x^*x)=\langle \pi(x^*x)\varphi,\varphi\rangle_\pi=\|\pi(x)\varphi\|_\pi^2$ for any $x\in M$, the relation $\lim_\alpha  \|\pi(x_\alpha)\varphi\|_\pi=0$ is inferred. Hence, $\lim_\alpha \pi(x_\alpha)\varphi=0$. From this $\lim_\alpha \pi(x_\alpha)z\varphi=0$ is seen to hold, for all $z\in \pi(M)^{\,\prime}$.
By uniform boundedness then also $\lim_\alpha \pi(x_\alpha)\eta=0$, for any $\eta\in [\pi(M)^{\,\prime}\varphi]$. This finally yields that ${\mathsf{st}}-\lim_\alpha \pi(x_\alpha)p_\pi(\varphi)=0$ (strong operator convergence). From this and since in line with formula \eqref{traeger} $p_\pi(\varphi)=s(\nu_\pi)$ and $p_\pi(\xi)=s(\varrho_\pi)$ must be fulfilled, \eqref{supp2} then implies ${\mathsf {st}}-\lim_\alpha \pi(x_\alpha)p_\pi(\xi)=0$, and therefore $\lim_\alpha  \pi(x_\alpha)\xi=0$ follows.
Hence, $$\lim_\alpha\varrho(x_\alpha^*x_\alpha)=\lim_\alpha\langle \pi(x_\alpha^*x_\alpha)\xi,\xi\rangle_\pi=\lim_\alpha\|\pi(x_\alpha)\xi\|_\pi^2=0$$ Since $\{x_\alpha\}\subset M$ was an arbitrary bounded net obeying $\lim_\alpha \nu(x_\alpha^*x_\alpha)=0$, by Definition \ref{absstet} this means $\varrho\dashv \nu$. Thus, the implication \eqref{supp2} $\Rightarrow$ \eqref{supp1} is seen to be true. Now, suppose \eqref{supp1} holds. By the Kaplansky density theorem the unit sphere of $\pi(M)$ is strongly dense within the unit sphere of $\pi(M)^{\,\prime\prime}$. Hence, there is a net $\{w_\alpha\}\subset M$, with $\|\pi(w_\alpha)\|\leq 1$ for all $\alpha$, such that ${\mathsf{st}}-\lim_\alpha \pi(w_\alpha)=p_\pi(\varphi)^\perp=s(\nu_\pi)^\perp$. Note that, by standard facts, $\pi(M)$ is isometric isomorphic to the quotient ${\mathsf C}^*$-algebra $M/{\ker \pi}$, that is,
$
\|\pi(w)\|=\inf_{u\in \ker \pi} \|w+u\|
$
holds, for each $w\in M$. Therefore, owing to $\|\pi(w_\alpha)\|\leq 1$, for each $\alpha$ some $u_\alpha\in \ker\pi$ can be chosen such that $x_\alpha=w_\alpha+u_\alpha$ is obeying $\|x_\alpha\|\leq 2$ and $\pi(x_\alpha)=\pi(w_\alpha)$, for each $\alpha$. From this in view of the above
$$\lim_\alpha \nu(x_\alpha^*x_\alpha)=\lim_\alpha  \|\pi(x_\alpha)\varphi\|_\pi^2=\lim_\alpha  \|\pi(w_\alpha)\varphi\|_\pi^2=\|p_\pi(\varphi)^\perp\varphi\|_\pi^2=\|s(\nu_\pi)^\perp\varphi\|_\pi^2=0$$
can be inferred, with the uniformly bounded net $\{x_\alpha\}\subset M$. Since \eqref{supp1} has been supposed, the conclusion is
$$0=\lim_\alpha \varrho(x_\alpha^*x_\alpha)=\lim_\alpha  \|\pi(x_\alpha)\xi\|_\pi^2=\|p_\pi(\varphi)^\perp\xi\|_\pi^2$$
That is, $p_\pi(\varphi)\xi=\xi$ holds. From this $p_\pi(\varphi)\geq p_\pi(\xi)$ follows, which is \eqref{supp2} when formula \eqref{traeger} is taken into account, i.e.~\eqref{supp1} $\Rightarrow$ \eqref{supp2} is true. Hence  \eqref{supp1} $\Leftrightarrow$ \eqref{supp2} holds.

Now we are going to show that \eqref{supp2} $\Leftrightarrow$ \eqref{supp3}. To this sake, let $\{\pi,{\mathcal H}_\pi\}$ to be chosen such that the $\pi$-fibres of both $\nu$ and $\varrho$ are non-void. Suppose $\tilde{\varphi}\in {\mathcal S}_{\pi,M}(\nu)$, $\tilde{\xi}\in {\mathcal S}_{\pi,M}(\varrho)$ with
\begin{equation}\label{absstetpos}
h_{\tilde{\xi},\tilde{\varphi}}^\pi\geq 0
\end{equation}
over $\pi(M)^{\,\prime}$. According to Theorem \ref{positiv1} this choice always can be accomplished. Especially, since the condition \eqref{absstetpos} is fulfilled, an application of Lemma \ref{affipos} provides existence of some densely defined, positive self-adjoint linear operator $A$ affiliated to $\pi(M)^{\,\prime\prime}$ and obeying $p_\pi(\tilde{\varphi})A= A p_\pi(\tilde{\varphi})$ such that
\begin{subequations}\label{absstetpos0}
\begin{equation}\label{absstetpos1}
p_\pi(\tilde{\varphi})\tilde{\xi}=A\tilde{\varphi}
\end{equation}
Suppose \eqref{supp2} to be fulfilled. By formula \eqref{traeger} then $p_\pi(\tilde{\xi})=s(\varrho_\pi)\leq s(\nu_\pi)= p_\pi(\tilde{\varphi})$ holds. Hence, \eqref{absstetpos1} can be extended into
\begin{equation}\label{absstetpos2}
\tilde{\xi}=A\tilde{\varphi}
\end{equation}
Now, let $\varphi\in {\mathcal S}_{\pi,M}(\nu)$ be fixed but arbitrarily chosen. Let $v\in \pi(M)^{\,\prime}$, with $v^*v= p_\pi^{\,\prime}(\tilde{\varphi})$, be the partial isometry in $\pi(M)^{\,\prime}$ and transforming $\tilde{\varphi}$ into the given $\varphi$. Then, since for each $z\in \pi(M)^{\,\prime}$ one has $z A\subset A z$, from \eqref{absstetpos2} we get $v\tilde{\xi}=v A\tilde{\varphi}=A v\tilde{\varphi}=A\varphi$. On the other hand, since \eqref{absstetpos2} is equivalent to $\tilde{\xi}\in [{\pi(M)^{\,\prime\prime}}_+\tilde{\varphi}]$, one infers that $p_\pi^{\,\prime}(\tilde{\xi})\leq p_\pi^{\,\prime}(\tilde{\varphi})$ is fulfilled. Hence $\xi=v\tilde{\xi}\in {\mathcal S}_{\pi,M}(\varrho)$, and thus finally we have
\begin{equation}\label{absstetpos3}
\xi=A\varphi
\end{equation}
\end{subequations}
for some  $\xi\in {\mathcal S}_{\pi,M}(\varrho)$, and with a densely defined, positive self-adjoint linear operator $A$ affiliated to $\pi(M)^{\,\prime\prime}$, and which owing to $s(\nu_\pi)=p_\pi(\varphi)=p_\pi(\tilde{\varphi})$ in addition is obeying $s(\nu_\pi)A= A s(\nu_\pi)$, that is, the implication \eqref{supp2} $\Rightarrow$ \eqref{supp3} is true. The validity of the reverse implication \eqref{supp3} $\Rightarrow$ \eqref{supp2} is true for, if in line with \eqref{supp3} the operator
$A$ with the mentioned properties is obeying \eqref{absstetpos3}, then from this $p_\pi(\varphi)\xi=\xi$ follows. Hence,
 the relation $p_\pi(\xi)\leq p_\pi(\varphi)$ is fulfilled for the special pair of implementing vectors $\xi\in {\mathcal S}_{\pi,M}(\varrho)$ and $\varphi\in {\mathcal S}_{\pi,M}(\nu)$ figuring in \eqref{supp3}. In view of \eqref{traeger} this is \eqref{supp2}.
Finally, if $\xi$, $\varphi$ and the operator $A$ are supposed as specified in \eqref{supp3}, then \eqref{absstetpos3} is fulfilled. From this by Lemma \ref{affipos} \eqref{affipos.1} $\Leftrightarrow$ \eqref{affipos.2} the relation  $h=h_{\xi,\varphi}^\pi\geq 0$ on $\pi(M)^{\,\prime}$ is implied. By Theorem \ref{bas5} \eqref{bas53} then  $$d_B(M|\varrho,\nu)=\|\xi-\varphi\|_\pi$$
 and which by Lemma \ref{bas3}\,\eqref{bas3aa} equivalently says that $\xi\in  {\mathcal S}_{\pi,M}(\varrho)$ is a solution of
 \[
 {F(M|\varrho,\nu)}=\langle \xi,\varphi\rangle_\pi
 \]
 Now, since \eqref{supp2} is fulfilled we have that each $\mu\in M_+^*$ obeying $\mu\leq \varrho$ and $\mu\perp\nu$ has to be vanishing. Hence, Corollary \ref{uniqueimp}\,  \eqref{uniqueimp0}$\Leftrightarrow$\eqref{uniqueimp2}$\Rightarrow$\eqref{uniqueimp1} applies, and thus $\xi$ in \eqref{supp3} then has to be the unique  $\xi\in {\mathcal S}_{\pi,M}(\varrho)$ to given  $\varphi\in {\mathcal S}_{\pi,M}(\nu)$ and obeying $d_B(M|\varrho,\nu)=\|\xi-\varphi\|_\pi$.
\end{proof}
Relating Theorem \ref{suppabsstet}\,\eqref{supp2}, if $M$ is a ${\mathsf{W}}^*$-algebra, then in restriction to the normal state space ${\mathcal{S}}_0(M)$  the relation  $\dashv$ can be characterized by the following addendum.
\begin{corolla}\label{standform}
Suppose $M$ is a ${\mathsf{W}}^*$-algebra, and  $\nu,\varrho\in {\mathcal{S}}_0(M)$. Then, $\varrho\dashv\nu$ holds if, and only if,  $s(\varrho)\leq s(\nu)$ is fulfilled  with the respective support orthoprojections.
\end{corolla}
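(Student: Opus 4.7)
The plan is to reduce the corollary to Theorem \ref{suppabsstet} by working in an appropriately chosen representation of $M$. First, I would choose a faithful normal unital $^*$-representation $\{\pi,{\mathcal H}_\pi\}$ of the ${\mathsf W}^*$-algebra $M$ — for instance, a standard form representation — the existence of which is a classical fact of modular theory. In such a representation every normal state on $M$ is implemented by a vector, so both the $\pi$-fibres ${\mathcal S}_{\pi,M}(\nu)$ and ${\mathcal S}_{\pi,M}(\varrho)$ are non-void and Theorem \ref{suppabsstet} is applicable. The equivalence \eqref{supp1}$\Leftrightarrow$\eqref{supp2} there then yields that $\varrho\dashv\nu$ holds if, and only if, $s(\varrho_\pi)\leq s(\nu_\pi)$ holds inside $\pi(M)^{\,\prime\prime}$.

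The remaining step is to translate the support inequality back to $M$ itself. Because $\pi$ is faithful and normal, it extends to a $^*$-iso\-mor\-phism of $M$ onto $\pi(M)=\pi(M)^{\,\prime\prime}$ (the image is its own bicommutant thanks to normality), and for every normal state $\mu\in{\mathcal S}_0(M)$ one has $\mu=\mu_\pi\circ\pi$. Normality of $\pi$ preserves arbitrary suprema of orthoprojections and the identity $\mu(p)=0\Leftrightarrow \mu_\pi(\pi(p))=0$ for $p\in M$ an orthoprojection immediately gives $s(\mu_\pi)=\pi(s(\mu))$. Since $\pi$ is an order isomorphism of the orthoprojection lattice, $s(\varrho_\pi)\leq s(\nu_\pi)$ in $\pi(M)^{\,\prime\prime}$ is equivalent to $s(\varrho)\leq s(\nu)$ in $M$. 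Combining this with the equivalence from Theorem \ref{suppabsstet} completes the argument.

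The only non-trivial ingredient used here is the existence of a faithful normal representation of the ${\mathsf W}^*$-algebra $M$ in which every normal state is implemented by a vector; this is a classical W$^*$-theoretic fact, not a real obstacle. Once it is granted, the corollary is obtained as a direct reduction to Theorem \ref{suppabsstet} via the order-preserving transport of support orthoprojections under the chosen faithful normal isomorphism.
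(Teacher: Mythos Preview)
Your proposal is correct and follows essentially the same route as the paper: choose a faithful normal unital $^*$-representation in which every normal state is a vector state (the paper invokes Sakai's theorem, you invoke the standard form), apply Theorem \ref{suppabsstet} to obtain $\varrho\dashv\nu\Leftrightarrow s(\varrho_\pi)\leq s(\nu_\pi)$, and then transport the support inequality back to $M$ via the relation $\pi(s(\mu))=s(\mu_\pi)$, which the paper proves by a two-sided minimality argument and you obtain from the order-isomorphism property of a normal $^*$-isomorphism. The two arguments differ only in wording.
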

\begin{proof}
By standard results, see \cite{Saka:56},  \cite[1.16.7.\,Theorem]{Saka:66}, for a  ${\mathsf{W}}^*$-algebra there exists a unital $^*$-representation $\{\pi,{\mathcal{H}}\}$ of $M$ such that $\pi: M\rightarrow N$ is a unital $^*$-isomorphism between $M$ and a  $vN$-algebra $N=\pi(M)=\pi(M)^{\,\prime\prime}$ acting over the Hilbert space ${\mathcal{H}}$ and  such that the $\pi$-fibre of each normal state of $M$ is non-trivial. In line with this,   suppose $\eta\in {\mathcal{S}}_{\pi,M}(\mu)$, for a normal state $\mu$ over $M$. Consider the vector state $\mu_\pi$ over $N$, that is  $\mu_\pi(z)=\langle z\eta,\eta\rangle_\pi$ holds, for each $z\in N$. Clearly, $\mu_\pi$ is a normal state on $N$. The support orthoprojection $s(\mu_\pi)$ of the vector state $\mu_\pi$ over $N$ is $s(\mu_\pi)=p_\pi(\eta)$. On the other hand, due to $\mu(x)=\mu_\pi(\pi(x))$, for all $x\in M$, $q=\pi(s(\mu))\in N$ is an orthoprojection of $N$ which is obeying $\mu_\pi(q)=1$, and therefore $q\geq s(\mu_\pi)$ holds. But since $\pi$ as a $^*$-isomorphism maps faithfully onto $N$, there is a unique orthoprojection $p=\pi^{-1}(s(\mu_\pi))\in M$. Obviously, we have $\mu(p)=\mu_\pi(\pi(p))=\mu_\pi(s(\mu_\pi))=1$. Hence $p\geq s(\mu)$ is fulfilled. From this we infer that $s(\mu_\pi)=\pi(p)\geq \pi(s(\mu))=q$. Accordingly, since in view of the above $q\geq s(\mu_\pi)$ holds, we may summarize all this into $s(\mu_\pi)=\pi(p)\geq \pi(s(\mu))\geq s(\mu_\pi)$. That is, the conclusion is $\pi(s(\mu))=s(\mu_\pi)$. Thus, under the assumption that the $\pi$-fibres of all normal states are non-trivial, $\pi(s(\mu))=s(\mu_\pi)$ holds, for each $\mu\in {\mathcal{S}}_0(M)$. Hence, since $\pi$ is a $^*$-isomorphism, from this relation and for $\nu,\varrho\in {\mathcal{S}}_0(M)$ we conclude that $s(\varrho)\leq s(\nu)$ holds if, and only if, $s(\varrho_\pi)\leq s(\nu_\pi)$.  The equivalence between $\varrho\dashv\nu$ and  $s(\varrho)\leq s(\nu)$ for normal states $\nu,\varrho\in {\mathcal{S}}_0(M)$ then is a consequence of Theorem \ref{suppabsstet} when applied in respect of the above mentioned special  ${\mathsf{W}}^*$-representation $\{\pi,{\mathcal{H}}\}$, for instance.
\end{proof}
Another question arising is under what  circumstances not only $\xi\in {\mathcal S}_{\pi,M}(\varrho)$ in context of  Theorem \ref{suppabsstet}\,\eqref{supp3} is  uniquely determined by the fact that it can be obtained as the image of $\varphi$ under  an  affiliated to $N=\pi(M)^{\,\prime\prime}$ self-adjoint, positive linear operator $A$, but whether or not the latter can be unique in some sense. One criterion comes along with the following result (refer to the notations used in context of Theorem \ref{suppabsstet}). Consider the dense in $p_\pi(\varphi){\mathcal H}_\pi$ domain
\begin{subequations}\label{ope0}
\begin{equation}\label{opea}
{\mathcal D}_\pi(\varphi)=\{z\varphi: z\in N^{\,\prime}\}
\end{equation}
and a linear operator  $A_0$ over $p_\pi(\varphi){\mathcal H}_\pi$ with domain of definition ${\mathcal D}_\pi(\varphi)$ given by
\begin{equation}\label{ope}
A_0:{\mathcal D}_\pi(\varphi)\ni z\varphi\longmapsto z\xi
\end{equation}
\end{subequations}
Let $p=p_\pi(\varphi)=s(\nu_\pi)$. It is obvious that $A_0$ is affiliated to the $vN$-algebra  $N_p=p\, N p$ acting on $p{\mathcal H}_\pi$. Remark that since $A_0$ is positive and symmetric, there exists the positive self-adjoint Friedrichs extension $A_F$ of $A_0$. Note that the restriction to $p{\mathcal H}_\pi$ of each operator $A$ considered in context of Theorem \ref{suppabsstet}\,\eqref{supp3} has to be a self-adjoint extension of this $A_0$.
\begin{lemma}\label{supp5}
The following assertions are mutually equivalent:
\begin{enumerate}
\item \label{supp5.1}
$A_0$ is essentially self-adjoint;
\item \label{supp5.2}
$p_\pi(\xi+\varphi)=p_\pi(\varphi)$.
\end{enumerate}
\end{lemma}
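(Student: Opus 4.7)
The plan is to exploit the fact that $A_0$ sits inside a canonical self-adjoint restriction of the given operator $A$, and then recognize essential self-adjointness as the core property, which in turn is controlled by the closure of a single orbit under $N^{\,\prime}$.

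First I would set up the ambient self-adjoint extension. Since $A$ is positive self-adjoint, affiliated with $N$, and commutes with $p = p_\pi(\varphi) = s(\nu_\pi)$, the restriction $\tilde A = A\,|\,p{\mathcal H}_\pi$ is a positive self-adjoint operator on $p{\mathcal H}_\pi$, and $\xi = A\varphi \in p{\mathcal H}_\pi$. Using that every $z \in N^{\,\prime}$ leaves $\mathrm{dom}(A)$ invariant with $Az\varphi = zA\varphi = z\xi$, one sees that $A_0 \subset \tilde A$, so $A_0$ is positive symmetric with $\tilde A$ as a distinguished self-adjoint extension.

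Next I would invoke the standard fact that, for a symmetric operator $A_0$ that admits a self-adjoint extension $\tilde A$, essential self-adjointness of $A_0$ is equivalent to $\overline{A_0} = \tilde A$, i.e.\ to ${\mathcal D}_\pi(\varphi)$ being a core for $\tilde A$. (Indeed, any self-adjoint closed extension of $A_0$ sitting inside $\tilde A$ must coincide with $\tilde A$, because two comparable self-adjoint operators are equal.) Since $\tilde A$ is non-negative, $-1 \in \rho(\tilde A)$ and $(\tilde A + I)$ is a bijective homeomorphism from $\mathrm{dom}(\tilde A)$, equipped with the graph norm, onto $p{\mathcal H}_\pi$. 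Hence ${\mathcal D}_\pi(\varphi)$ is a core for $\tilde A$ if and only if $(\tilde A + I){\mathcal D}_\pi(\varphi)$ is norm-dense in $p{\mathcal H}_\pi$.

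Finally I would identify this range explicitly. For every $z \in N^{\,\prime}$,
\[
(\tilde A + I)(z\varphi) = z\xi + z\varphi = z(\xi+\varphi),
\]
so $(\tilde A + I){\mathcal D}_\pi(\varphi) = N^{\,\prime}(\xi+\varphi)$, whose closure is $[N^{\,\prime}(\xi+\varphi)] = p_\pi(\xi+\varphi){\mathcal H}_\pi$. Because $\xi+\varphi \in p{\mathcal H}_\pi$, the inclusion $p_\pi(\xi+\varphi) \le p_\pi(\varphi)$ holds automatically, and density inside $p{\mathcal H}_\pi$ is equivalent to equality $p_\pi(\xi+\varphi) = p_\pi(\varphi)$. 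Chaining the equivalences gives \eqref{supp5.1} $\Leftrightarrow$ \eqref{supp5.2}.

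The only mildly delicate point is the passage through the core criterion: one must be careful to distinguish essential self-adjointness of $A_0$ from existence of a self-adjoint extension (which is automatic here via $\tilde A$ and the Friedrichs construction). The rest is a direct computation within $N^{\,\prime}$.
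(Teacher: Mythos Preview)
Your argument is correct and takes a genuinely different, more economical route than the paper. The paper works intrinsically with $A_0$: it identifies $\ker(A_0^*+p)$ with $(p_\pi(\varphi)-p_\pi(\xi+\varphi)){\mathcal H}_\pi$, uses this for the direction \eqref{supp5.1}$\Rightarrow$\eqref{supp5.2} via a positivity obstruction (a nonzero kernel element would give $A_0^*$ a negative eigenvalue, forcing $A_F\subsetneq A_0^*$), and for \eqref{supp5.2}$\Rightarrow$\eqref{supp5.1} proves by a Cauchy-sequence argument that $\overline{A_0}+p$ has full range and then deduces $A_0^*=A_0^{**}$. You instead exploit the ambient positive self-adjoint $\tilde A=A|_{p{\mathcal H}_\pi}$ supplied by Theorem~\ref{suppabsstet}\,\eqref{supp3} (the paper notes just before the lemma that any such $A$ restricts to a self-adjoint extension of $A_0$), reduce essential self-adjointness to the core property for $\tilde A$, and read off the range $(\tilde A+p){\mathcal D}_\pi(\varphi)=N^{\,\prime}(\xi+\varphi)$ in one line. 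Your approach is shorter and conceptually transparent; the paper's has the minor advantage of being self-contained in terms of $A_0$ alone, without invoking the particular extension $\tilde A$ from the surrounding theorem.
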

\begin{proof}
With $p=p_\pi(\varphi)$, note that  owing to Theorem \ref{suppabsstet}\,\eqref{supp2}, $\xi\in p{\mathcal H}_\pi$ holds. Hence, $p_\pi(\xi+\varphi)\leq p$ is fulfilled. Let $q=p-p_\pi(\xi+\varphi)$. Then, since for $\eta\in p{\mathcal H}_\pi$ we have $\eta\in q{\mathcal H}_\pi$ if, and only if,  for each $z\in \pi(M)^{\,\prime}$,
\[
0=\langle \eta,z\xi+z\varphi\rangle_\pi=\langle\eta,(A_0+p)z\varphi\rangle_\pi=\langle(A_0^*+p)\eta,z\varphi\rangle_\pi
\]
is fulfilled,
one infers that
\begin{equation}\label{supp5.3}
 {{\mathsf{ker}}}(A_0^*+p)=q{\mathcal H}_\pi\subset {\mathcal D}_\pi^*(\varphi)
\end{equation}
with the adjoint to $A_0$ operator $A_0^*$ over $p{\mathcal H}_\pi$, with dense domain of definition ${\mathcal D}_\pi^*(\varphi)$ of $A_0^*$. Assume $q\not={\mathsf 0}$.  In view of \eqref{supp5.3} then there exists  $\eta\in {\mathcal D}_\pi^*(\varphi)\backslash \{{\mathsf 0}\}$ with  $A_0^*\eta=-\eta$. Accordingly, $A_0^*$ cannot be positive. Thus, since any self-adjoint extension of $A_0$ has to be a restriction of $A_0^*$, due to positivity of $A_F$ we then have a proper inclusion
$
A_F\subset A_0^*
$
to hold. Thus, $A_0^*$ then especially cannot be a self-adjoint operator, for by standard facts two mutually comparable self-adjoint extensions of one and the same operator cannot exist. Hence, under the condition of $p_\pi(\xi+\varphi)< p$ the operator $A_0$ cannot be essentially self-adjoint. By negation the validity of the implication \eqref{supp5.1}\,$\Rightarrow$\eqref{supp5.2} now follows.

To see the other direction, suppose \eqref{supp5.2}. By \eqref{supp5.3} one then infers that
\begin{equation}\label{supp5.4}
 {{\mathsf{ker}}}(A_0^*+p)=\{{\mathsf 0}\}
\end{equation}
Let $\psi\in p{\mathcal H}_\pi$ be  arbitrarily chosen. Also  $q={\mathsf 0}$ implies existence of a sequence $\{\varphi_n\}\subset {\mathcal D}_\pi(\varphi)$, with $\varphi_n=z_n\varphi\in p {\mathcal{H}}_\pi$ for some $z_n\in \pi(M)^{\,\prime}$,  such that $$\psi=\lim_{n\to\infty} z_n(\xi+\varphi)=\lim_{n\to\infty} (A_0+p)\varphi_n$$
Let $\psi_n=(A_0+p)\varphi_n$. Then, as a converging sequence $\{\psi_n\}$ is a Cauchy sequence.  From $\{\varphi_n\}\subset{\mathcal D}_\pi(\varphi)\subset  p {\mathcal{H}}_\pi$ and by positivity of $A_0$, for arbitrary $n,m\in {\mathbb{N}}$ the conclusion is
\begin{eqnarray*}
  \|\psi_n-\psi_m\|_\pi^2 &=& \|A_0(\varphi_n-\varphi_m\|_\pi^2+2 \Re \langle A_0(\varphi_n-\varphi_m),p(\varphi_n-\varphi_m)\rangle_\pi + \\
  & & \phantom{\|A_0(\varphi_n-\varphi_m\|_\pi^2+2 \Re \langle A_0(\varphi_n-\varphi_m),}+\|p(\varphi_n-\varphi_m)\|_\pi^2 \\
   &=& \|A_0(\varphi_n-\varphi_m\|_\pi^2+2\langle A_0(\varphi_n-\varphi_m),(\varphi_n-\varphi_m)\rangle_\pi+\|\varphi_n-\varphi_m\|_\pi^2 \\
   &\geq & \|\varphi_n-\varphi_m\|_\pi^2
\end{eqnarray*}
That is, $\{\varphi_n\}$ is  a Cauchy sequence, too. Also, by standard facts, since  $A_0$ is a symmetric operator, the closure exists and reads $\overline{A}_0=A_0^{**}$. Hence, for the closed linear operator $B=A_0^{**}+p$ and Cauchy sequence $\{\psi_n\}$ we have
\[
\psi=\lim_{n\to\infty} \psi_n=\lim_{n\to\infty} B\varphi_n
\]
Hence, by closeness of $B$, if $\varphi_0=\lim_{n\to\infty} \varphi_n$ then $\varphi_0$ has to belong to the domain of definition ${\mathcal D}_\pi(B)$ of
$B$, but which equals the domain of definition ${\mathcal D}_\pi^{**}(\varphi)$ of $A_0^{**}$, and $\psi=B\varphi_0$. Thus, the conclusion is that each $\psi\in p{\mathcal H}_\pi$ is in the range of $B$,
\begin{equation}\label{supp6}
p{\mathcal H}_\pi=B {\mathcal D}_\pi(B)
\end{equation}
Now, by symmetry we have $A_0\subseteq A_0^{**}\subseteq A_0^*$. Let $\psi^*\in {\mathcal D}_\pi^*(\varphi)$ be arbitrarily chosen. Then, in view of \eqref{supp6} there exists $\chi\in {\mathcal D}_\pi(B)$ such that
$
B\chi=(A_0^*+p)\psi^*
$.
On the other hand, owing to $B=A_0^{**}+p$ and since $A_0^*$ extends $A_0^{**}$ we have ${\mathcal D}_\pi(B)\subseteq {\mathcal D}_\pi^*(\varphi)$, and therefore
$
B\chi=(A_0^{**}+p)\chi=(A_0^*+p)\chi
$
holds, too. Hence, $\psi^*-\chi\in {{\mathsf{ker}}}(A_0^*+p)$ is seen. In view of \eqref{supp5.4} $\psi^*=\chi\in {\mathcal D}_\pi(B)$ follows. Hence, ${\mathcal D}_\pi^*(\varphi)\subseteq {\mathcal D}_\pi(B)={\mathcal D}_\pi^{**}(\varphi)$. Thus  $A_0^*=A_0^{**}$ is seen. This is \eqref{supp5.1}.
\end{proof}
\begin{remark}\label{saka6}
A densely defined positive linear operator
$a_0$ over a Hilbert space ${\mathcal H}$ is self-adjoint (resp.\,essentially self-adjoint) if, and only if, the range ${\mathcal{R}}({\mathsf{1}}+a_0)$ of ${\mathsf{1}}+a_0$ is satisfying ${\mathcal H}={\mathcal{R}}({\mathsf{1}}+a_0)$ (resp.\,${\mathcal{R}}({\mathsf{1}}+a_0)$ is dense in ${\mathcal H}$). The facts follow by literally the same kind of arguments as those raised in context of $A_0$ in the last part of the proof of Lemma \ref{supp5}.
\end{remark}
\begin{example}\label{suppfin}
If $A$ occuring in Theorem \ref{suppabsstet}\,\eqref{supp3} is bounded, or $s(\nu_\pi)=p_\pi(\varphi)$ is a finite  orthoprojection of the $vN$-algebra $N=\pi(M)^{\,\prime\prime}$, then $p_\pi(\xi+\varphi)=p_\pi(\varphi)$.
\end{example}
\begin{proof}
If  $A$ is bounded, it is the unique  self-adjoint extension of $A_0$ in \eqref{ope}. Thus, $A_0$ is essentially self-adjoint. On the other hand, if $p=p_\pi(\varphi)$ is finite, the positive symmetric operator $A_0$ in \eqref{ope} is affiliated to the finite $vN$-algebra $N_p$ acting on the Hilbert space $p{\mathcal H}_\pi$. For each symmetric operator affiliated to a finite type $vN$-algebra one knows that the  closure is the unique self-adjoint extension, see e.g.~in  \cite[9.8]{StZs:79}. Thus, in this context again  $A_0$ is essentially self-adjoint.  In both cases Lemma \ref{supp5} applies and   yields the result.
\end{proof}
Suppose $\varrho, \nu$ are states with $\varrho\dashv \nu$, and $\{\pi,{\mathcal H}_\pi\}$ and $\varphi\in {\mathcal S}_{\pi,M}(\nu)$, $\xi\in {\mathcal S}_{\pi,M}(\varrho)$ chosen in accordance with the settings of  Theorem \ref{suppabsstet}, and be  $N=\pi(M)^{\,\prime\prime}$. The following almost trivial criterion can prove useful for constructing counterexamples.
\begin{corolla}\label{supp5neg}
The operator $A_0$ in \eqref{ope} is not essentially self-adjoint if, and only if,
\begin{equation}\label{supp5neg0}
q\xi=-q\varphi
\end{equation}
is fulfilled, for some orthoprojection $q\in N$ obeying ${\mathsf{0}}<q\leq p_\pi(\varphi)$.
\end{corolla}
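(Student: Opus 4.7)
The plan is to obtain the corollary as a direct translation of Lemma \ref{supp5}. That lemma already states the essential self-adjointness of $A_0$ is equivalent to $p_\pi(\xi+\varphi)=p_\pi(\varphi)$. Writing $p=p_\pi(\varphi)$, failure of essential self-adjointness therefore means precisely that the orthoprojection
\[
q_0 := p - p_\pi(\xi+\varphi)
\]
is non-zero. Since $p_\pi(\xi+\varphi)\in \pi(M)^{\,\prime\prime}=N$ and $p\in N$, one has $q_0\in N$; and since $\xi+\varphi\in p\,{\mathcal H}_\pi$ (here $p\varphi=\varphi$ trivially, while $p\xi=\xi$ follows from $\xi=A\varphi$ together with $pA\subseteq Ap$ granted by Theorem \ref{suppabsstet}\,\eqref{supp3}), the orthoprojection $p_\pi(\xi+\varphi)$ is dominated by $p$, so in fact $0<q_0\leq p$. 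The only remaining work is to verify that the condition $q_0(\xi+\varphi)=0$ is equivalent to the existence assertion in \eqref{supp5neg0}.

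For the forward direction, suppose $A_0$ is not essentially self-adjoint. Since $p_\pi(\xi+\varphi)(\xi+\varphi)=\xi+\varphi$ and $p(\xi+\varphi)=\xi+\varphi$, one computes
\[
q_0(\xi+\varphi) = p(\xi+\varphi)-p_\pi(\xi+\varphi)(\xi+\varphi) = 0,
\]
so $q=q_0$ satisfies the claim. For the backward direction, let $q\in N$ with $0<q\leq p$ and $q\xi=-q\varphi$, i.e.~$q(\xi+\varphi)=0$. Since $q$ commutes with every $z'\in \pi(M)^{\,\prime}=N^{\,\prime}$, one has
\[
q\,z'(\xi+\varphi)=z'q(\xi+\varphi)=0, \qquad \forall\, z'\in N^{\,\prime},
\]
so $q$ annihilates the closed subspace $[N^{\,\prime}(\xi+\varphi)]=p_\pi(\xi+\varphi){\mathcal H}_\pi$. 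Equivalently, $q\,p_\pi(\xi+\varphi)={\mathsf 0}$, whence $p_\pi(\xi+\varphi)\leq p-q<p$; by Lemma \ref{supp5} this means $A_0$ is not essentially self-adjoint.

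There is no genuine obstacle here: the whole statement is the Lemma \ref{supp5} criterion reformulated in terms of an orthoprojection $q$ witnessing that $p_\pi(\xi+\varphi)$ fails to equal $p$. The only care needed is the book-keeping that $p_\pi(\xi+\varphi)$ really lies in $N$ and is dominated by $p$, both of which are immediate from the construction of $\xi$ via Theorem \ref{suppabsstet}. Thus the corollary can be stated and proved in half a page, serving, as advertised, as a convenient source of counterexamples where one simply needs to exhibit a non-zero $q\leq p$ with $q\xi=-q\varphi$.
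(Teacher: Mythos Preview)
Your proof is correct and follows essentially the same route as the paper: both directions reduce to the negation of Lemma~\ref{supp5}, with $q_0=p_\pi(\varphi)-p_\pi(\xi+\varphi)$ serving as the witness in the forward direction and the orthogonality $q\,p_\pi(\xi+\varphi)={\mathsf 0}$ forcing $p_\pi(\xi+\varphi)<p_\pi(\varphi)$ in the backward direction. Your explicit use of commutation with $N^{\,\prime}$ to conclude $q\,p_\pi(\xi+\varphi)={\mathsf 0}$ is slightly more detailed than the paper's one-line version, but the argument is the same.
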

\begin{proof}
Under the given suppositions, by negation from Lemma \ref{supp5} one infers that the operator  $A_0$ is not essentially self-adjoint iff $p_\pi(\xi+\varphi)<p_\pi(\varphi)$ is fulfilled. From the latter condition clearly $q\xi=-q\varphi\not= {\mathsf 0}$ follows, for each orthoprojection  $q\in N\backslash\{{\mathsf 0}\}$ obeying $q\leq p_\pi(\varphi)-p_\pi(\xi+\varphi)$. On the other hand,  for each non-trivial orthoprojection $q\in N$ with
$q\xi=-q\varphi$ one has  $q(\xi+\varphi)={\mathsf 0}$. From this and ${\mathsf 0}<q\leq p_\pi(\varphi)$ owing to $p_\pi(\xi)\leq p_\pi(\varphi)$ we get $q\leq p_\pi(\varphi)-p_\pi(\xi+\varphi)$. That is,  $p_\pi(\xi+\varphi)<p_\pi(\varphi)$ follows.
\end{proof}
\subsubsection{The stratum of a state}\label{strat}
For a given state $\nu$, let
\begin{subequations}\label{stratum}
\begin{equation}\label{vorstratum}
\Omega^0_M(\nu)=\bigl\{\varrho\in {\mathcal S}(M):\ \varrho\dashv \nu\bigr\}
\end{equation}
be the subset of all states which are absolutely continuous in respect of $\nu$, and be
\begin{equation}\label{stratum0}
\Omega_M(\nu)=\bigl\{\varrho\in {\mathcal S}(M):\ \varrho\dashv\vdash \nu\bigr\}
\end{equation}
\end{subequations}
Subsequently, each maximal subset of mutually absolutely continuous states will be referred to as `stratum'. Obviously, $\Omega_M(\nu)$ is the unique stratum to which $\nu$ belongs, and which therefore as $\nu$-stratum henceforth will be referred to.
\begin{lemma}\label{setprop}
For each state $\nu$ the following properties of $\Omega^0_M(\nu)$ and $\Omega_M(\nu)$ hold:
\begin{enumerate}
  \item \label{setprop1}
  $\Omega^0_M(\nu)$ and  $\Omega_M(\nu)$ both are sets which are convex in the affine sense;
  \item \label{setprop2}
  $\Omega^0_M(\nu)$ is closed;
  \item \label{setprop3}
  $\Omega_M(\nu)$ is a dense subset of $\Omega^0_M(\nu)$.
\end{enumerate}
Moreover, if $M$ is a ${\mathsf{W}}^*$-algebra with normal state space ${\mathcal{S}}_0(M)$, then for $\nu\in {\mathcal{S}}_0(M)$
\begin{enumerate}
\setcounter{enumi}{3}
  \item \label{setprop4}
$\Omega^0_M(\nu)$ and  $\Omega_M(\nu)$ both are subsets of normal states.
\end{enumerate}
\end{lemma}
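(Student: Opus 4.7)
The plan is to treat the four assertions in turn, each following directly from Definition \ref{absstet} together with one additional ingredient.

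For \eqref{setprop1} I would invoke linearity of states. If $\varrho_1,\varrho_2\dashv\nu$ and $\lambda\in[0,1]$, then for any bounded net $\{x_\alpha\}\subset M$ with $\nu(x_\alpha^*x_\alpha)\to 0$ the nonnegative quantities $\varrho_i(x_\alpha^*x_\alpha)$ both tend to $0$, hence so does their convex combination; this gives convexity of $\Omega_M^0(\nu)$. For $\Omega_M(\nu)$, the analogous argument applies in both directions, since for $\lambda\in(0,1)$ and $\mu=\lambda\varrho_1+(1-\lambda)\varrho_2$ the relation $\mu(x_\alpha^*x_\alpha)\to 0$ forces both $\varrho_i(x_\alpha^*x_\alpha)\to 0$ by nonnegativity, so that $\nu\dashv\varrho_i$ for each $i$ yields $\nu\dashv\mu$.

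For \eqref{setprop2} I would take a sequence $\{\varrho_k\}\subset\Omega_M^0(\nu)$ converging in the dual norm to $\varrho\in{\mathcal S}(M)$, and for any bounded net $\{x_\alpha\}$ with $\|x_\alpha\|\le C$ and $\nu(x_\alpha^*x_\alpha)\to 0$, use the estimate
\[
|\varrho(x_\alpha^*x_\alpha)-\varrho_k(x_\alpha^*x_\alpha)|\le C^{2}\,\|\varrho-\varrho_k\|_1
\]
combined with the standard $\varepsilon/2$-splitting: first fix $k$ so that $\|\varrho-\varrho_k\|_1<\varepsilon/(2C^2)$, then choose $\alpha$ large enough that $\varrho_k(x_\alpha^*x_\alpha)<\varepsilon/2$. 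For \eqref{setprop3} I would exhibit an explicit approximating sequence: for $\varrho\in\Omega_M^0(\nu)$ set $\mu_n=(1-1/n)\varrho+(1/n)\nu$. Then $\mu_n\in\Omega_M^0(\nu)$ by \eqref{setprop1}, while $\mu_n(x_\alpha^*x_\alpha)\ge(1/n)\,\nu(x_\alpha^*x_\alpha)\ge 0$ guarantees $\nu\dashv\mu_n$, so that in fact $\mu_n\in\Omega_M(\nu)$; finally $\|\mu_n-\varrho\|_1=(1/n)\|\nu-\varrho\|_1\le 2/n\to 0$.

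The main obstacle is the ${\mathsf W}^*$-algebraic item \eqref{setprop4}. My plan is to appeal to the normal/singular decomposition $\varrho=\varrho_n+\varrho_s$, $\varrho_n$ normal, $\varrho_s$ singular and positive, together with the standard characterization: a singular positive linear form $\varrho_s$ on $M$ admits an increasing net of orthoprojections $\{e_\beta\}\subset M$ with $e_\beta\uparrow{\mathsf 1}$ in the strong operator topology such that $\varrho_s(e_\beta)=0$ for every $\beta$ (see e.g.\ Takesaki, Theory of Operator Algebras I). Setting $x_\beta=e_\beta^\perp$ yields a bounded net for which $\nu(x_\beta^*x_\beta)=\nu(e_\beta^\perp)\to 0$ by normality of $\nu$, whereas
\[
\varrho(x_\beta^*x_\beta)=\varrho(e_\beta^\perp)\ge \varrho_s(e_\beta^\perp)=\|\varrho_s\|_1
\]
remains constant. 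Under the assumption $\varrho\dashv\nu$ this forces $\|\varrho_s\|_1=0$, i.e.\ $\varrho=\varrho_n$ is normal. Thus $\Omega_M^0(\nu)\subset{\mathcal S}_0(M)$, and the claim for $\Omega_M(\nu)$ follows from the inclusion $\Omega_M(\nu)\subset\Omega_M^0(\nu)$.
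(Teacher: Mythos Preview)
Your proof is correct. Parts \eqref{setprop1} and \eqref{setprop2} are handled exactly as in the paper.

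For \eqref{setprop3} your argument is genuinely simpler than the paper's. The paper works inside a unital $^*$-representation $\{\pi,{\mathcal H}_\pi\}$, invokes Theorem \ref{suppabsstet}\,\eqref{supp3} to write $\xi=A\varphi$ with positive self-adjoint $A$ affiliated to $\pi(M)^{\,\prime\prime}$, approximates $A$ by bounded $A_n\in (N_p)_+$, and then builds perturbed vectors $\eta_n=A_n\varphi+\varphi/n$; the point of the construction is that $nA_n+p$ is bounded invertible on $p{\mathcal H}_\pi$, forcing $p_\pi(\eta_n)=p_\pi(\varphi)$ and hence $\varrho_n\in\Omega_M(\nu)$ via Theorem \ref{suppabsstet}\,\eqref{supp2}. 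Your affine mixture $\mu_n=(1-1/n)\varrho+(1/n)\nu$ achieves the same conclusion directly from Definition \ref{absstet} and the already established convexity, without any representation theory. What the paper's construction buys is an approximation by states whose \emph{implementing vectors} converge to the implementing vector of $\varrho$ in $p{\mathcal H}_\pi$; this finer control is not needed for the bare density assertion, but is in the spirit of the paper's later implementation-based arguments.

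For \eqref{setprop4} the paper takes the more direct route: given a decreasing net $\{p_\alpha\}$ of orthoprojections with $\mathsf{g.l.b.}\{p_\alpha\}={\mathsf 0}$, normality of $\nu$ gives $\nu(p_\alpha)\to 0$, so $\varrho\dashv\nu$ forces $\varrho(p_\alpha)\to 0$ for every such net, and this is a known sufficient condition for normality. Your route through the normal/singular decomposition and Takesaki's characterization of singular forms is equally valid and only slightly longer.
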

\begin{proof}
Let $\bigl\{x_\alpha\bigr\}\subset M$ be a bounded net with $\lim_\alpha \nu(x_\alpha^*x_\alpha)=0$. For $\varrho,\omega\in \Omega^0_M(\nu)$ by Definition \ref{absstet} one has $\lim_\alpha \varrho(x_\alpha^*x_\alpha)=0$ and $\lim_\alpha \omega(x_\alpha^*x_\alpha)=0$. By linearity then it is trivial that $\lim_\alpha \mu(x_\alpha^*x_\alpha)=0$ follows, for each affine convex combination  $\mu=\lambda\,\varrho+(1-\lambda)\,\omega$, with $0< \lambda< 1$. Thus, convexity in the usual (affine) sense of the set defined in \eqref{vorstratum} is obvious. Also, in case if $\varrho,\omega\in \Omega_M(\nu)$, the previous argument shows that $\mu\dashv \nu$. On the other hand, if $\{y_\alpha\}\subset M$ is any bounded net with $\lim_\alpha \mu(y_\alpha^*y_\alpha)=0$, owing to $\varrho\leq \lambda^{-1}\mu$ then $\lim_\alpha \varrho(y_\alpha^*y_\alpha)=0$ follows. Hence, $\varrho\dashv \mu$ must be fulfilled. Due to the assumption $\varrho\in \Omega_M(\nu)$ and by transitivity of $\dashv$ from this $\nu\dashv \mu$ is seen. In view of the above stated $\mu\dashv \nu$ we conclude that $\mu\dashv\vdash \nu$, which means that $\mu\in \Omega_M(\nu)$. Thus the $\nu$-stratum is convex in the affine sense. This completes the proof of \eqref{setprop1}.\\
To see \eqref{setprop2}, suppose $\{\varrho_n\}$ is a sequence of states with $\varrho_n\dashv \nu$, for each $n$, and such that $\|\cdot\|_1-\lim_n \varrho_n=\varrho$. We are going to show that then $\varrho\dashv \nu$ holds. To this sake, suppose  $\{x_\alpha\}\subset M$ is a bounded net with $\lim_\alpha \nu(x_\alpha^*x_\alpha)=0$. We may suppose that $\sup \|x^*_\alpha x_\alpha\|=\beta>0$. Let $\varepsilon>0$ be an arbitrarily chosen real, and fix $m\in {{\mathbb{N}}}$ such that $\|\varrho-\varrho_m\|_1\leq \varepsilon/(2\beta )$. Also, let $\alpha_0$ be chosen such that $\varrho_m(x^*_\alpha x_\alpha)\leq \varepsilon/2$, for any $\alpha\geq \alpha_0$. Because of $\varrho_m\dashv \nu$ this choice of $\alpha_0$ is possible. Then, for $\alpha\geq \alpha_0$ the conclusion is
\[
\varrho(x_\alpha^*x_\alpha) \leq | \varrho(x_\alpha^*x_\alpha)-\varrho_m(x_\alpha^*x_\alpha)|+\varrho_m(x_\alpha^*x_\alpha)\leq \|\varrho-\varrho_m\|_1 \beta+ \varrho_m(x_\alpha^*x_\alpha)\leq \varepsilon/2 + \varepsilon/2
\]
Hence, $\varrho(x_\alpha^*x_\alpha) \leq\varepsilon$, for all $\alpha\geq \alpha_0$. Since such a subscript $\alpha_0$ can be chosen to any $\varepsilon>0$, the conclusion is $\lim_\alpha \varrho(x_\alpha^*x_\alpha)=0$. Hence, $\varrho\dashv \nu$ and thus \eqref{setprop2} is seen.

For the proof of \eqref{setprop3}, suppose $\varrho\dashv \nu$, that is, $\varrho\in \Omega^0_M(\nu)$. Let $\{\pi,{\mathcal H}_\pi\}$ be chosen such that the $\pi$-fibres of $\nu$ and $\varrho$ both exist. Let  $\varphi\in {\mathcal S}_{\pi,M}(\nu)$ and be $\xi\in {\mathcal S}_{\pi,M}(\varrho)$ the unique implementing vector chosen in accordance with Theorem \ref{absstet}. Let $p=p_\pi(\varphi)$, $N=\pi(M)^{\,\prime\prime}$ and consider the $vN$-algebra $N_p=p\,Np$ acting over $p{\mathcal H}_\pi$. By Theorem \ref{absstet}\,\eqref{supp3} there is  $\{A_n\}\subset (N_p)_+$ with $\lim_{n\to\infty} A_n\varphi=\xi$. Consider vectors  $
\eta_n=A_n\varphi+\varphi/n$.
By positivity of each $A_n$, $\|\eta_n\|\not=0$  and $\lim_{n\to\infty} \|\eta_n\|=1$ hold. Also, since $n A_n+p\in N_p$ is a bounded invertible operator over $p{\mathcal H}_\pi$, its range has to be full, and thus maps  $p{\mathcal H}_\pi=[\pi(M)^{\,\prime}\varphi]$ onto $p{\mathcal H}_\pi=[\pi(M)^{\,\prime}\eta_n]$. Hence, $p_\pi(\eta_n)=p_\pi(\varphi)$, and if states $\varrho_n$ are defined by
$
\xi_n=\eta_n/\|\eta_n\|\in {\mathcal S}_{\pi,M}(\varrho_n)
$
then we have that $s((\varrho_n)_\pi)=p_\pi(\xi_n)=p_\pi(\eta_n)=p_\pi(\varphi)=s(\nu_\pi)$ is fulfilled, for each $n$. That is, in view of \eqref{stratum0} and in applying Theorem \ref{suppabsstet}\,\eqref{supp2} accordingly,  $\varrho_n\in \Omega_M(\nu)$.
Finally, from  $\lim_{n\to\infty}\xi_n=\xi$ we infer $\lim_{n\to\infty}\varrho_n=\varrho$ in norm.

To see \eqref{setprop4}, suppose $\varrho$ to be a state with $\varrho\dashv \nu$. Consider a descendingly directed system $\{p_\alpha\}\subset M$ of orthoprojections satisfying $\mathsf{g.l.b.} \{p_\alpha\}={\mathsf{0}}$. Since  $\nu\in {\mathcal{S}}_0(M)$ by assumption, we have $\lim_\alpha \nu(p_\alpha)=0$. Due to $\varrho\dashv \nu$ from this $\lim_\alpha \varrho(p_\alpha)=0$ follows, for  each such system $\{p_\alpha\}$. This implies $\varrho\in {\mathcal{S}}_0(M)$, by known  facts. Hence $\Omega^0_M(\nu)\subset {\mathcal{S}}_0(M)$, from which due to  $\Omega_M(\nu) \subset \Omega^0_M(\nu)$ the result follows.
\end{proof}
\begin{remark}\label{saka}
\begin{enumerate}
\item\label{saka0}
Relating `absolute continuity' in a ${\mathsf C}^*$-algebraic context see  \cite{Gudd:79}, \cite{Hiai:84}, and \cite[{\sc{Definition 1.1}}\,(iii)]{Nies:83}. The latter setting is used in Definition \ref{absstet}.
  \item \label{saka1}
  A special case of $\varrho\dashv\nu$ is based on the domination of positive linear forms. Namely, for  $\nu,\varrho\in {\mathcal S}(M)$ let $\varrho\ll \nu$ mean that $\lambda>0$ exists with $\lambda\,\varrho(x^*x)\leq \nu(x^*x)$, for any $x\in M$.
Then  $\varrho\ll \nu$ implies $\varrho\dashv\nu$. For ${\mathsf{dim}} M<\infty$ also the reverse holds.
\item \label{saka2} Let $\{\pi,{\mathcal H}_\pi\}$ be a unital $^*$-representation such that the $\pi$-fibres of both $\varrho$ and $\nu$ are non-void. By  Sakai's Radon-Nikodym theorem \cite{Saka:65,Saka:71}, $\varrho\ll \nu$ implies existence of $A\in \pi(M)_+^{\,\prime\prime}$ such that, with $\varphi\in {\mathcal S}_{\pi,M}(\nu)$,
\begin{equation}\label{srnt}
\varrho(x)=\langle \pi(x)A\varphi,A\varphi\rangle_\pi
\end{equation}
holds, for any $x\in M$, with $A$ uniquely determined provided $Ap_\pi(\varphi)=A$.
\item  \label{saka3}
By Theorem \ref{suppabsstet},  under the premises of \eqref{saka2}, $\varrho\dashv \nu$ is equivalent to \eqref{srnt} with positive self-adjoint $A$ affiliated with $\pi(M)^{\,\prime\prime}$ and obeying $p_\pi(\varphi)A=A p_\pi(\varphi)$. This generalization of Sakai's RN-theorem is due to Araki \cite{Arak:72}.
\item\label{saka5}
Obviously, Theorem \ref{supp5}/Example \ref{suppfin} remain true if in place of $\xi+\varphi$ the vector $\alpha \xi+\beta\varphi$ is used, with arbitrary reals obeying $\alpha\geq 0$, $\beta>0$.
\end{enumerate}
\end{remark}
\subsection{The special case of all bounded linear  operators}\label{beimatrix}
The ${\mathsf C}^*$-algebra  $M={\mathsf B}({\mathcal H})$ of bounded linear operators  on a  separable Hilbert space ${\mathcal H}$ will be considered. The shortcut $x\smile y$ is used to indicate commutation $x y=y x$ between $x,y\in {\mathsf B}({\mathcal H})$.
\subsubsection{Basic facts about density operators}\label{beimatrix0}
As $^*$-representation $\{\pi,{\mathcal H}_\pi\}$ consider the left action of bounded linear operators $x$
\begin{subequations}\label{hs0}
\begin{equation}\label{hs1}
{\mathsf B}({\mathcal H})\ni x\longmapsto \pi(x)a=x a\in {\mathsf{ H.S.}}({\mathcal H})
\end{equation}
on Hilbert-Schmidt operators $a\in {\mathsf{ H.S.}}({\mathcal H})$ over ${\mathcal H}$. Thus,
\begin{equation}\label{hs2}
{\mathcal H}_\pi={\mathsf{ H.S.}}({\mathcal H})=\{a\in{\mathsf B}({\mathcal H}): {{\mathsf{tr}}}\, a a^*<\infty\}
\end{equation}
is taken as representation Hilbert space, with the  scalar product defined as
\begin{equation}\label{hs3}
\langle b,a\rangle_{{\mathsf{ H.S.}}}={{\mathsf{tr}}}\,b a^*
\end{equation}
with the usual trace ${{\mathsf{tr}}}$, see \cite{Scha:70} for operator theoretical reasonings, and Example \ref{ex1} for the general context. Remark that by the very definition in \eqref{hs1}, the unital $^*$-representation $\pi$ under consideration is a $^*$-isomorphism onto the $vN$-algebra of all operators of left-multiplications (by bounded linear operators) to Hilbert-Schmidt operators.
The vector states with respect to $\{\pi,{\mathcal H}_\pi\}$ are given as
\begin{equation}\label{nstate}
\varrho(x)=\langle \pi(x)\, a,a\rangle_{{\mathsf{ H.S.}}}={{\mathsf{tr}}}\,x\, a a^*,\text{ with } \|a\|^2_{{\mathsf{ H.S.}}}={{\mathsf{tr}}}\,a a^*=1,
\end{equation}
\end{subequations}
for all $x\in{\mathsf B}({\mathcal H})$, and exactly correspond to the normal states ${\mathcal S}_0({\mathsf B}({\mathcal H}))$ over $M={\mathsf B}({\mathcal H})$. Thereby, the correspondence between normal states $\varrho$ and normalized positive trace class operators  $ \sigma_\varrho=a a^*$ in formula \eqref{nstate} is one-to-one. The unique $\sigma_\varrho$ is the density operator of $\varrho$, and    $\varrho(x)={{\mathsf{tr}}}\,\sigma_\varrho\, x$ holds, at each $x\in {\mathsf B}({\mathcal H})$.

Also note that  $\pi({\mathsf B}({\mathcal H}))^{\,\prime}$ is implemented by the right-actions of bounded linear operators to Hilbert-Schmidt operators. Thus, for $Z\in \pi({\mathsf B}({\mathcal H}))^{\,\prime}$ there is a unique $z\in {\mathsf B}({\mathcal H})$ such that $Z\, c=cz$, for each $c\in {\mathsf{H.S.}}({\mathcal H})$. Thereby, the map ${\mathsf B}({\mathcal H})\ni z\longmapsto Z\in \pi({\mathsf B}({\mathcal H}))^{\,\prime}$ is a  $^*$-(anti)morphism.
Accordingly, the $\pi$-fibre \eqref{faser} of $\varrho$ is easily seen to be
\begin{equation}\label{nfibre}
   {\mathcal S}_{\pi,{\mathsf B}({\mathcal H})}(\varrho)=\bigl\{b\in {\mathsf{ H.S.}}({\mathcal H}): b=\sqrt{\sigma_\varrho}w^*,\ w^*w=s(\varrho),\,w\in {\mathsf B}({\mathcal H}) \bigr\}
\end{equation}
with the support orthoprojection $s(\varrho)$ of $\varrho$ in ${\mathsf B}({\mathcal H})$, which is the same as the support $s(\sigma_\varrho)$ of the density operator $\sigma_\varrho$. Since according to \eqref{nfibre} the $\pi$-fibre to each normal state is non-trivial, in context of fidelity between normal states over $M={\mathsf B}({\mathcal H})$ the special $^*$-representation $\{\pi,{\mathsf{H.S.}}({\mathcal H})\}$ constantly will be used.

Remind that a normal state $\varrho$ with full support orthoprojection $s(\varrho)={\mathsf 1}$ is called faithful. Since ${\mathcal H}$ is supposed to be separable, faithful normal states exist. The set ${\mathcal S}_0^{\mathsf{faithful}}({\mathsf B}({\mathcal H}))$ of all faithful normal states is convex (in the affine sense) and uniformly dense in the convex set ${\mathcal S}_0({\mathsf B}({\mathcal H}))$. In contrast to a faithful state, a normal state $\nu$ with minimal support orthoprojection $s(\nu)$ is extremal in ${\mathcal S}_0({\mathsf B}({\mathcal H}))$, and uniquely corresponds to the density operator $\sigma_\nu=s(\nu)=p_\psi$, with $p_\psi$ being the minimal orthoprojection projecting onto the one-dimensional complex space spanned by a unit vector $\psi$. As a special case of \eqref{nstate}, for an extremal normal state $\nu$ we have
\begin{equation}\label{exstate}
\nu(x)={{\mathsf{tr}}}\,x\,p_\psi =\langle x \psi,\psi\rangle
\end{equation}
with unit vector $\psi\in {\mathcal{H}}$ which is uniquely determined by $\nu$ up to a complex multiple of modulus one. In the situation \eqref{exstate} often a notation like $\nu_\psi$ will be used.
The set ${\mathsf{ex}}\,{\mathcal S}_0({\mathsf B}({\mathcal H}))$ of all extremal (or equivalently: pure) normal states is of fundamental importance in the geometry of the normal state space,  for each $\varrho\in{\mathcal S}_0({\mathsf B}({\mathcal H}))$ can be represented as a convex combination of extremal normal states $\nu_{\psi_j}$
\begin{subequations}
\begin{equation}\label{conde}
 \varrho=\sum_j \lambda_j \nu_{\psi_j}
\end{equation}
(finite or countably infinite), with strictly positive reals $\lambda_j$ summing up to one. Generally, such decompositions may be highly non-unique. To reduce this a little, subsequently will refer to \eqref{conde} as {\it{minimal}} decomposition of $\varrho$ if
\begin{equation}\label{conde1}
s(\varrho)\not= \bigvee_{j\not=k} p_{\psi_j}, \forall k
\end{equation}
\end{subequations}
Representatively, minimal decompositions will be dealt with:
\newpage
\begin{corolla}\label{minidec}
Let $\varrho\in {\mathcal S}_0^{\mathsf{faithful}}({\mathsf B}({\mathcal H}))$, with density operator  $\sigma_\varrho$, over a separable Hilbert space ${\mathcal H}$ with ${\mathsf{dim}}\,{\mathcal{H}}=\infty$, and be  $\{\psi_k\}_{k\in {\mathbb{N}}}$ a system of unit vectors. A convex combination $$\sum_{k=1}^\infty \lambda_k \nu_{\psi_k}$$   is a minimal decomposition of $\varrho$ if, and only if, the following conditions are fulfilled:
\begin{enumerate}
  \item\label{minidec1}
  $\psi_k \in {\mathcal{R}}(\sigma_\varrho)$,\ $\forall\,k\in {\mathbb{N}}$;
  \item\label{minidec2}
  $\bigl\{\sqrt{\sigma_\varrho}^{-1} \psi_k \bigr\}$ is a maximal systems of mutually orthogonal vectors in ${\mathcal{H}}$.
\end{enumerate}
Moreover, under these conditions $\lambda_k=\bigl\|\sqrt{\sigma_\varrho}^{-1} \psi_k\bigr\|^{-2}$ holds, for all $k\in {\mathbb{N}}$.
\end{corolla}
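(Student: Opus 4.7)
The plan is to encode the given decomposition $\varrho=\sum_k \lambda_k\nu_{\psi_k}$, equivalently the trace-class identity $\sigma_\varrho=\sum_k \lambda_k\, p_{\psi_k}$, by a single Hilbert--Schmidt operator and read off all three features (minimality, condition \eqref{minidec1}, condition \eqref{minidec2}) as properties of that operator. Introduce $T:\ell^{\,2}\to{\mathcal H}$ defined on the canonical basis $\{e_k\}$ of $\ell^{\,2}$ by $Te_k=\sqrt{\lambda_k}\,\psi_k$. A direct computation yields $T^*\eta=\sum_k \sqrt{\lambda_k}\langle\eta,\psi_k\rangle\,e_k$ and $TT^*=\sigma_\varrho$. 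Since $\varrho$ is faithful, $\sigma_\varrho$ has dense range and $\sqrt{\sigma_\varrho}$ is injective with dense range, so the polar decomposition of $T^*$ produces an isometry $W:{\mathcal H}\to \ell^{\,2}$ with $T^*=W\sqrt{\sigma_\varrho}$, equivalently $T=\sqrt{\sigma_\varrho}\,W^*$. Consequently each $\psi_k=\sqrt{\sigma_\varrho}\bigl(W^*e_k/\sqrt{\lambda_k}\bigr)$ lies automatically in ${\mathcal R}(\sqrt{\sigma_\varrho})$, and $\sqrt{\sigma_\varrho}^{-1}\psi_k=W^*e_k/\sqrt{\lambda_k}$.

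By definition, minimality asks that for every $k$ there exist $\eta\not={\mathsf 0}$ orthogonal to all $\psi_j$ with $j\not= k$, which through the formula for $T^*\eta$ above translates into the requirement that every $e_k$ lie in the range of $T^*=W\sqrt{\sigma_\varrho}$ (injectivity of $T^*$, forced by density of $\mathrm{range}\,T$, supplies the corresponding $\eta\not={\mathsf 0}$). Applying the isometry identity $W^*W={\mathsf 1}$, the condition ``$e_k\in \mathrm{range}\,T^*$ for each $k$'' splits into two independent statements: firstly $W^*e_k\in{\mathcal R}(\sqrt{\sigma_\varrho})$ for every $k$, and secondly $WW^*e_k=e_k$ for every $k$. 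The first is precisely $\sqrt{\sigma_\varrho}^{-1}\psi_k\in{\mathcal R}(\sqrt{\sigma_\varrho})$, i.e.\ $\psi_k\in{\mathcal R}(\sigma_\varrho)$, which is exactly \eqref{minidec1}. The second, being valid on a total set $\{e_k\}$, forces the projection $WW^*$ to be the identity, so the isometry $W$ is unitary.

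To link unitarity of $W$ with condition \eqref{minidec2}, note that under \eqref{minidec1} the vectors $\chi_k=\sqrt{\sigma_\varrho}^{-1}\psi_k=W^*e_k/\sqrt{\lambda_k}$ are all non-zero (since $\psi_k\not={\mathsf 0}$ and $\sqrt{\sigma_\varrho}^{-1}$ is injective on its domain). Mutual orthogonality $\langle\chi_j,\chi_k\rangle=0$ for $j\not= k$ reads $\langle WW^*e_j,e_k\rangle=0$, so the projection $WW^*$ must be diagonal in the basis $\{e_k\}$; since its diagonal entries $\|W^*e_k\|^2$ are all strictly positive, this forces $WW^*={\mathsf 1}$, i.e.\ $W$ unitary. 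Conversely, if $W$ is unitary then $\{W^*e_k\}$ is an orthonormal basis of ${\mathcal H}$, so $\{\chi_k\}$ is indeed a maximal orthogonal system (closed span $W^*\ell^{\,2}={\mathcal H}$). Combining with the preceding paragraph yields the equivalence (minimality) $\Leftrightarrow$ (\eqref{minidec1} and \eqref{minidec2}). The formula $\lambda_k=\|\sqrt{\sigma_\varrho}^{-1}\psi_k\|^{-2}$ then follows at once from $\|\chi_k\|^2=\|W^*e_k\|^2/\lambda_k=1/\lambda_k$, using that $W^*$ is isometric on the orthonormal basis $\{e_k\}$.

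The main technical subtlety is the bookkeeping between the ranges of $\sqrt{\sigma_\varrho}$ and $\sigma_\varrho$ when ${\mathsf{dim}}\,{\mathcal H}=\infty$: here $\sigma_\varrho$ is compact with dense non-closed range, $\sqrt{\sigma_\varrho}^{-1}$ is unbounded, and the weaker statement $\psi_k\in{\mathcal R}(\sqrt{\sigma_\varrho})$ (which the polar decomposition gives for free) does \emph{not} by itself entail $\psi_k\in{\mathcal R}(\sigma_\varrho)$. Keeping this distinction straight is what shows that minimality genuinely splits into the two independent ingredients \eqref{minidec1} (strong regularity of each $\psi_k$) and \eqref{minidec2} (which, via $WW^*={\mathsf 1}$, is the unitarity half of minimality).
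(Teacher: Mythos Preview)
Your proof is correct and takes a genuinely different route from the paper's. The paper defers the entire argument to Lemma~A.2 in the appendix, which in turn rests on the structural Lemma~A.1 analysing, for each individual $k$, the positivity and support of $\Lambda-\lambda p_{\psi_k}$ in terms of whether $\psi_k$ lies in ${\mathcal R}(\sqrt{\Lambda})$ versus ${\mathcal R}(\Lambda)$ and whether $\lambda$ hits the critical value $\|\varphi_{0k}\|^{-2}$. One then assembles these pointwise facts via the identity $p_{\psi_k}=\|\varphi_{0k}\|^2\sqrt{\Lambda}\,p_{\varphi_k}\sqrt{\Lambda}$ to obtain $\mathsf{1}=\sum_k p_{\varphi_k}$ and read off orthonormality and completeness.

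Your argument instead packages the whole decomposition into a single Hilbert--Schmidt operator $T$ with $TT^*=\sigma_\varrho$, and the polar decomposition $T=\sqrt{\sigma_\varrho}\,W^*$ does all the work at once: minimality becomes the range condition $\{e_k\}\subset{\mathcal R}(T^*)$, which via $W^*W=\mathsf{1}$ splits cleanly into the regularity condition $W^*e_k\in{\mathcal R}(\sqrt{\sigma_\varrho})$ (i.e.\ \eqref{minidec1}) and the unitarity condition $WW^*=\mathsf{1}$ (i.e.\ \eqref{minidec2}). This is shorter and more conceptual---it bypasses Lemma~A.1 entirely and exhibits a direct bijection between minimal decompositions and unitaries $W$ subject to the range constraint. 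The paper's approach, on the other hand, is more modular: Lemma~A.1 is reused elsewhere (e.g.\ in Lemma~A.3), so isolating it has independent value within the paper's architecture.
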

\begin{proof}
Remark that $\sqrt{\sigma_\varrho}^{-1} $ is self-adjoint over the dense domain ${\mathcal{D}}={\mathcal{R}}(\sqrt{\sigma_\varrho})$. Hence ${\mathcal{R}}(\sigma_\varrho)\subset {\mathcal{D}}$, and thus the condition \eqref{minidec1} guarantees that the settings within \eqref{minidec2} make sense. Furthermore, due to the definition of the term `minimal decomposition', the problem equivalently asks for necessary and sufficient conditions which assure that a convex combination $\sum_{k=1}^\infty \lambda_k p_{\psi_k}$ will be constituting a minimal decomposition of the  density operator $\sigma_\varrho$. As a problem about positive trace-class operators, this is affirmatively answered by Lemma \ref{minidec0} (put $\Lambda=\sigma_\varrho$ there).
\end{proof}
Remark that for each faithful $\varrho$ the stratum $\Omega_{{\mathsf B}({\mathcal H})}(\varrho)$ as defined in \eqref{stratum0} is
\begin{equation}\label{equi2a}
\Omega_{{\mathsf B}({\mathcal H})}(\varrho)={\mathcal S}_0^{\mathsf{faithful}}({\mathsf B}({\mathcal H}))
\end{equation}
In fact, due to $s(\varrho)={\mathsf 1}$ this follows by Lemma \ref{setprop}\,\eqref{setprop4} and Example \ref{Bop}\,\eqref{equi2}.

Finally, what happens if the relation $\dashv$ enters will be exemplified.
\begin{example}\label{Bop} Suppose $\varrho,\nu\in {\mathcal S}_0({\mathsf B}({\mathcal H}))$, with support projections $s(\nu), s(\varrho)$. Let $a\in {\mathcal S}_{\pi,{\mathsf B}({\mathcal H})}(\nu)$, $b\in {\mathcal S}_{\pi,{\mathsf B}({\mathcal H})}(\varrho)$ and be
$a^*b=u|a^*b|$ the polar decomposition of $a^*b$.
\begin{enumerate}
  \item \label{equi1} $ \varrho\dashv \nu\,\,\,\, \ \Longleftrightarrow\  s(\varrho)\leq s(\nu)$;
  \item \label{equi2} $ \varrho\dashv\vdash \nu\ \Longleftrightarrow\  s(\varrho)=s(\nu)$;
  \item \label{equi3} $ \varrho\dashv \nu\,\,\,\,\,\ \Longrightarrow\ \xi(a)=b u^*\in {\mathcal S}_{\pi,{\mathsf B}({\mathcal H})}(\varrho)$,
  \end{enumerate}
  where $\xi(\cdot)$ is the map ${\mathcal S}_{\pi,{\mathsf B}({\mathcal H})}(\nu)\ni c \longmapsto \xi(c)\in {\mathcal S}_{\pi,{\mathsf B}({\mathcal H})}(\varrho)$ of Corollary \ref{uniqueimp}\,\eqref{uniqueimp1a}.
\end{example}
\begin{proof}
Since ${\mathsf B}({\mathcal H})$ is a $vN$-algebra, application of
Corollary \ref{standform} yields that \eqref{equi1} and \eqref{equi2} hold true, for normal states $\nu, \varrho$. Therefore, since $\varrho\dashv\nu$ implies Corollary \ref{uniqueimp}\,\eqref{uniqueimp0} to be fulfilled, in view of Corollary \ref{uniqueimp}\,\eqref{uniqueimp3} and Remark \ref{polspez} the mapping  $${\mathcal S}_{\pi,{\mathsf B}({\mathcal H})}(\nu)\ni a \longmapsto \xi(a)\in {\mathcal S}_{\pi,{\mathsf B}({\mathcal H})}(\varrho)$$ reads with
$\xi(a)={v_{b,a}^\pi}^* b =b u^*\in {\mathcal S}_{\pi,{\mathsf B}({\mathcal H})}(\varrho)$.
\end{proof}
\subsubsection{Fidelity between density operators}\label{beimatrix1}
When seen in terms of Example \ref{ex1}, for the fidelity between normal states in the context at hand formula \eqref{bei1} will be of relevance, exclusively. Therefore, we are going to retrace in detail along the lines of the general ideas made use of in Lemma \ref{nv} and Theorem \ref{positiv1} how the special case of \eqref{bei1}, that is Uhlmann's formula, will come about.
\begin{lemma}\label{nvbas}
Let $\varrho,\nu\in {\mathcal S}_0({\mathsf B}({\mathcal H}))$. Then the following holds:
\begin{equation}\label{nvbasi}
{{\mathsf{tr}}}\,|\sqrt{\sigma_\nu}\sqrt{\sigma_\varrho}\,|=\max_{a\in {\mathcal S}_{\pi,{\mathsf B}({\mathcal H})}(\nu),\,b\in {\mathcal S}_{\pi,{\mathsf B}({\mathcal H})}(\varrho)} |{{\mathsf{tr}}}\,b a^*|
\end{equation}
\end{lemma}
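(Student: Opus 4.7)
My plan is to unpack the two sides of \eqref{nvbasi} using the explicit parametrisation \eqref{nfibre} of the $\pi$-fibres, and then identify the right-hand side as a known trace-norm extremal problem whose value equals ${\mathsf{tr}}\,|\sqrt{\sigma_\nu}\sqrt{\sigma_\varrho}\,|$ via polar decomposition. Along the way I will only need to invoke Theorem \ref{positiv1} (for the fact that the supremum is attained, hence is a maximum) and the general identity of Example \ref{ex1}, which in the present context of $M={\mathsf B}({\mathcal H})$ with $\tau={\mathsf{tr}}$ already asserts $F(M|\nu,\varrho)={\mathsf{tr}}\,|\sqrt{\sigma_\nu}\sqrt{\sigma_\varrho}\,|$.

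First I would fix arbitrary $a\in{\mathcal S}_{\pi,{\mathsf B}({\mathcal H})}(\nu)$ and $b\in{\mathcal S}_{\pi,{\mathsf B}({\mathcal H})}(\varrho)$ and use \eqref{nfibre} to write $a=\sqrt{\sigma_\nu}\,w^{*}$ and $b=\sqrt{\sigma_\varrho}\,v^{*}$, with partial isometries $w,v\in{\mathsf B}({\mathcal H})$ satisfying $w^{*}w=s(\nu)$ and $v^{*}v=s(\varrho)$. Then, by the trace identity and the cyclic property,
\begin{equation*}
 {\mathsf{tr}}\,b a^{*}={\mathsf{tr}}\,\sqrt{\sigma_\varrho}\,v^{*}w\,\sqrt{\sigma_\nu}={\mathsf{tr}}\,K\,\sqrt{\sigma_\nu}\sqrt{\sigma_\varrho},\ \ \text{with }K=v^{*}w\in{\mathsf B}({\mathcal H}),\ \|K\|\leq 1.
\end{equation*}
Applying the standard trace-class/operator-norm duality to the trace-class operator $\sqrt{\sigma_\nu}\sqrt{\sigma_\varrho}$ gives
\begin{equation*}
|{\mathsf{tr}}\,b a^{*}|=|{\mathsf{tr}}\,K\,\sqrt{\sigma_\nu}\sqrt{\sigma_\varrho}|\leq\|K\|\,{\mathsf{tr}}\,|\sqrt{\sigma_\nu}\sqrt{\sigma_\varrho}\,|\leq {\mathsf{tr}}\,|\sqrt{\sigma_\nu}\sqrt{\sigma_\varrho}\,|,
\end{equation*}
which is the $\leq$ direction of \eqref{nvbasi}.

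For the reverse direction I would exhibit a concrete maximising pair. Let $\sqrt{\sigma_\nu}\sqrt{\sigma_\varrho}=U\,|\sqrt{\sigma_\nu}\sqrt{\sigma_\varrho}\,|$ be the polar decomposition, so that $U$ is a partial isometry whose initial orthoprojection $U^{*}U$ is the projection onto the closure of the range of $|\sqrt{\sigma_\nu}\sqrt{\sigma_\varrho}\,|$, and one has $U^{*}U\leq s(\varrho)$ and $UU^{*}\leq s(\nu)$. Setting $w=U^{*}s(\nu)$ (extended trivially on $s(\nu)^\perp{\mathcal H}$ to a partial isometry with initial projection $s(\nu)$, which is possible because $s(\nu){\mathcal H}$ and $s(\varrho){\mathcal H}$ both sit in the separable ${\mathcal H}$ and the uncovered complementary pieces have equal dimension after enlarging via an auxiliary partial isometry between $s(\nu){\mathcal H}\ominus UU^{*}{\mathcal H}$ and $s(\varrho){\mathcal H}\ominus U^{*}U{\mathcal H}$) and $v=s(\varrho)$ yields $w^{*}w=s(\nu)$, $v^{*}v=s(\varrho)$ and $v^{*}w=U^{*}$. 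With this choice
\begin{equation*}
{\mathsf{tr}}\,b a^{*}={\mathsf{tr}}\,U^{*}\sqrt{\sigma_\nu}\sqrt{\sigma_\varrho}={\mathsf{tr}}\,U^{*}U\,|\sqrt{\sigma_\nu}\sqrt{\sigma_\varrho}\,|={\mathsf{tr}}\,|\sqrt{\sigma_\nu}\sqrt{\sigma_\varrho}\,|,
\end{equation*}
proving that the bound is attained, so the supremum is a maximum and equality holds in \eqref{nvbasi}.

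The main obstacle will be the technical step of extending the polar-partial-isometry $U^{*}$ to a full partial isometry $w$ with $w^{*}w=s(\nu)$: one has to balance the dimension defects of $UU^{*}\leq s(\nu)$ and $U^{*}U\leq s(\varrho)$. In a separable Hilbert space this is routine—either invoke the Murray--von\,Neumann comparability of the residual projections (they are both subprojections of orthoprojections in the type I factor ${\mathsf B}({\mathcal H})$, and one may always enlarge $w$ by any partial isometry between the residual subspaces, absorbing the excess inside $s(\nu)^{\perp}{\mathcal H}$)—and this detail is exactly the content of the general Lemma \ref{nv}/Theorem \ref{positiv1} in the present special setting. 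Alternatively, the entire chain can be short-circuited by citing Example \ref{ex1} with $x=\sqrt{\sigma_\nu}$, $y=\sqrt{\sigma_\varrho}$, which yields $F({\mathsf B}({\mathcal H})|\nu,\varrho)={\mathsf{tr}}\,|\sqrt{\sigma_\nu}\sqrt{\sigma_\varrho}\,|$, and by Theorem \ref{positiv1} the supremum defining $F$ is attained in any $^{*}$-representation with non-trivial fibres—in particular in $\{\pi,{\mathsf{H.S.}}({\mathcal H})\}$—giving \eqref{nvbasi} directly.
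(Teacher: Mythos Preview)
Your upper-bound argument is correct and essentially the paper's. The gap is in your explicit maximiser. With $v=s(\varrho)$ fixed, the condition $v^{*}w=U^{*}$ reads $s(\varrho)w=U^{*}$; combined with $w^{*}w=s(\nu)$ this forces the remaining piece $s(\varrho)^{\perp}w$ to be a partial isometry with initial projection $s(\nu)-UU^{*}$ and final space contained in $s(\varrho)^{\perp}{\mathcal H}$. Your parenthetical instead routes $(s(\nu)-UU^{*}){\mathcal H}$ into $(s(\varrho)-U^{*}U){\mathcal H}\subset s(\varrho){\mathcal H}$, which would alter $s(\varrho)w$ and destroy $v^{*}w=U^{*}$; the asserted equality of the dimensions of $s(\nu)-UU^{*}$ and $s(\varrho)-U^{*}U$ is moreover false in general (for $\nu$ pure and $\varrho$ faithful on infinite-dimensional ${\mathcal H}$ the former vanishes while the latter has infinite rank), and the phrase about ``absorbing the excess inside $s(\nu)^{\perp}{\mathcal H}$'' confuses initial and final spaces of $w$. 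The construction is salvageable, since one always has $\dim(s(\nu)-UU^{*}){\mathcal H}\leq\dim s(\varrho)^{\perp}{\mathcal H}$ (because $\sqrt{\sigma_\nu}$ injects the former into the latter), but that is not what you argued.

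The paper avoids this bookkeeping by \emph{not} pinning one partial isometry to a support projection. In the infinite-dimensional case it fixes an auxiliary $p$ with $p\sim p^{\perp}\sim{\mathsf 1}$, sends $s(\nu)-uu^{*}$ into $p{\mathcal H}$ and all of $s(\varrho)$ into $p^{\perp}{\mathcal H}$ via independent partial isometries $v_1,v_2$, and sets $v=v_1+v_2u^{*}$, $w=v_2$ (its $u$ is your $U$); orthogonality of the two final spaces gives $w^{*}v=u^{*}$ for free. A separate, simpler argument (extending $u^{*}$ to a unitary) handles finite dimensions. Your fallback via Example~\ref{ex1} and Theorem~\ref{positiv1} is logically valid, but the paragraph introducing the lemma says its purpose is precisely to retrace Uhlmann's formula concretely in the ${\mathsf B}({\mathcal H})$ setting, so that shortcut bypasses what the lemma is there to illustrate.
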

\begin{proof}
We are going to show that for $a\in {\mathcal S}_{\pi,{\mathsf B}({\mathcal H})}(\nu)$ and $b\in {\mathcal S}_{\pi,{\mathsf B}({\mathcal H})}(\varrho)$
\begin{equation}\label{nvspez1}
 |{{\mathsf{tr}}}\,b a^*|\leq {{\mathsf{tr}}}\,|\sqrt{\sigma_\nu}\sqrt{\sigma_\varrho}\,|
\end{equation}
holds. In line with \eqref{nfibre}, let partial isometries $v$ and $w$ with $v^*v=s(\nu)$ and $w^*w=s(\varrho)$ be chosen such that $a=\sqrt{\sigma_\nu}\,v^*$ and $b=\sqrt{\sigma_\varrho}\,w^*$ are fulfilled. Let
$
\sqrt{\sigma_\nu}\sqrt{\sigma_\varrho}=u|\sqrt{\sigma_\nu}\sqrt{\sigma_\varrho}\,|
$
be the polar decomposition of $\sqrt{\sigma_\nu}\sqrt{\sigma_\varrho}$. Then we can conclude that
\begin{eqnarray*}
 |{{\mathsf{tr}}}\,b a^*| &=& |{{\mathsf{tr}}}\,\sqrt{\sigma_\varrho}\,w^* v\sqrt{\sigma_\nu}|= |{{\mathsf{tr}}}\,\sqrt{\sigma_\nu}\sqrt{\sigma_\varrho}\,w^* v| \\
  &=& |{{\mathsf{tr}}}\,u|\sqrt{\sigma_\nu}\sqrt{\sigma_\varrho}|\,w^* v| \\
  &=& |{{\mathsf{tr}}}\,w^*vu|\sqrt{\sigma_\nu}\sqrt{\sigma_\varrho}| \,|= |{{\mathsf{tr}}}\,|\sqrt{\sigma_\nu}\sqrt{\sigma_\varrho}| (w^*vu) \,|\\
  &\leq & \bigl\| |\sqrt{\sigma_\nu}\sqrt{\sigma_\varrho}|\bigr\|_1={{\mathsf{tr}}}\,|\sqrt{\sigma_\nu}\sqrt{\sigma_\varrho}|,
\end{eqnarray*}
where the latter estimate follows by standard facts and since the operator $w^*vu$ belongs to the unit ball of  ${\mathsf{B}}({\mathcal{H}})$. That is, \eqref{nvspez1} is seen to be true. Obviously, from the latter the assertion \eqref{nvbasi} will follow if the partial isometries $v$ and $w$ featuring in the expressions of $a$ and $b$ could be specialized in such a way that equality
\begin{equation}\label{nvspez2}
 {{\mathsf{tr}}}\,b a^*={{\mathsf{tr}}}\,|\sqrt{\sigma_\nu}\sqrt{\sigma_\varrho}\,|
\end{equation}
occurred. To see that this can be accomplished, first remind that the characteristic partial isometry $u$
appearing in the polar decomposition of $\sqrt{\sigma_\nu}\sqrt{\sigma_\varrho}$ is satisfying $u^*u=s(|\sqrt{\sigma_\nu}\sqrt{\sigma_\varrho}\,|)$, $u u^*=s(|\sqrt{\sigma_\varrho}\sqrt{\sigma_\nu}\,|)$ and $|\sqrt{\sigma_\varrho}\sqrt{\sigma_\nu}\,|=u|\sqrt{\sigma_\nu}\sqrt{\sigma_\varrho}\,|u^*$.

In case of ${\mathsf{dim}}\,{\mathcal{H}}<\infty$, $w={\mathsf{1}}$ and $v=u_0 s(\nu)$ with a unitary extension $u_0$ of $u^*$ is chosen. Then $a=\sqrt{\sigma_\nu} v^*\in {\mathcal S}_{\pi,{\mathsf B}({\mathcal H})}(\nu)$, and since $s(|\sqrt{\sigma_\varrho}\sqrt{\sigma_\nu}\,|)\leq s(\nu)$ holds, the relation
$s(|\sqrt{\sigma_\nu}\sqrt{\sigma_\varrho}\,|)v=u^*s(\nu)=u^*$ is fulfilled. We may conclude as follows:
\begin{eqnarray*}
  {{\mathsf{tr}}}\,b a^* &=& {{\mathsf{tr}}}\,\sqrt{\sigma_\varrho} v\sqrt{\sigma_\nu}=  {{\mathsf{tr}}}\,\sqrt{\sigma_\nu}\sqrt{\sigma_\varrho} v \\
   &=& {{\mathsf{tr}}}\,u|\sqrt{\sigma_\nu}\sqrt{\sigma_\varrho}| v= {{\mathsf{tr}}}\,u|\sqrt{\sigma_\nu}\sqrt{\sigma_\varrho}|s(|\sqrt{\sigma_\nu}\sqrt{\sigma_\varrho}\,|) v\\
   &=& {{\mathsf{tr}}}\,u|\sqrt{\sigma_\nu}\sqrt{\sigma_\varrho}|u^*=
   {{\mathsf{tr}}}\,u^*u|\sqrt{\sigma_\nu}\sqrt{\sigma_\varrho}|=
   {{\mathsf{tr}}}\,|\sqrt{\sigma_\nu}\sqrt{\sigma_\varrho}|,
\end{eqnarray*}
that is, \eqref{nvspez2} is satisfied, for some $a\in {\mathcal S}_{\pi,{\mathsf B}({\mathcal H})}(\nu)$ and $b\in {\mathcal S}_{\pi,{\mathsf B}({\mathcal H})}(\varrho)$.

In case of separable ${\mathcal{H}}$ with ${\mathsf{dim}}\,{\mathcal{H}}=\infty$, let us fix an  orthoprojection $p$
obeying ${\mathsf{dim}}\,p{\mathcal H}={\mathsf{dim}}\,p^\perp {\mathcal H}=\infty$. Then, since we are in a factor $vN$-algebra, for any two orthoprojections $q$ and $q^\prime$ over $\mathcal H$ we have comparability in the $vN$-sense $p\succ q$ and $p^\perp\succ q^\prime$. Thus especially, with  $q=s(\nu)-s(|\sqrt{\sigma_\varrho}\sqrt{\sigma_\nu}\,|)$ and $q^\prime=s(\varrho)$, partial isometries  $v_1$, $v_2$ exist  such that $v_1 v_1^*\leq p\phantom{\perp}$, $v_1^*v_1 =s(\nu)-s(|\sqrt{\sigma_\varrho}\sqrt{\sigma_\nu}\,|)$ and $v_2 v_2^*\leq p^\perp$, $ v_2^*v_2 = s(\varrho)$.
Clearly, from this $v_2^*v_1={\mathsf{0}}=v_1^*v_2$ follows.
Define $w=v_2$ and $v=v_1+v_2 u^*$. Then, by the previous $w^*w=s(\varrho)$ is fulfilled. We are going to show that $v^*v=s(\nu)$. In fact, $s(\varrho)\geq s(|\sqrt{\sigma_\nu}\sqrt{\sigma_\varrho}\,|)$ implies $u s(\varrho)u^*=u s(|\sqrt{\sigma_\nu}\sqrt{\sigma_\varrho}\,|)u^*=uu^*u u^*=s(|\sqrt{\sigma_\varrho}\sqrt{\sigma_\nu}\,|)$, and thus  one can conclude as follows:
\begin{eqnarray*}
v^*v &=&(v_1^*+u v_2^*)( v_1+v_2 u^*)=v_1^*v_1+u v_2^*v_2u^*\\
&=& s(\nu)-s(|\sqrt{\sigma_\varrho}\sqrt{\sigma_\nu}\,|)+u s(\varrho)u^*\\
&=& s(\nu)-s(|\sqrt{\sigma_\varrho}\sqrt{\sigma_\nu}\,|)+s(|\sqrt{\sigma_\varrho}\sqrt{\sigma_\nu}\,|)= s(\nu)
\end{eqnarray*}
Thus $a=\sqrt{\sigma_\nu}\,v^*\in {\mathcal S}_{\pi,{\mathsf B}({\mathcal H})}(\nu)$,  $b=\sqrt{\sigma_\varrho}\,w^*\in {\mathcal S}_{\pi,{\mathsf B}({\mathcal H})}(\varrho)$. We are going to calculate
$b a^*=\sqrt{\sigma_\varrho}\,w^*v\sqrt{\sigma_\nu}$.
By definition of $v, w$ and $v_1, v_2$ we see that $w^*v=v_2^*(v_1+v_2 u^*)=v_2^*v_2 u^*=s(\varrho) u^*$. Hence, $b a^*=\sqrt{\sigma_\varrho}\,s(\varrho) u^*\sqrt{\sigma_\nu}=\sqrt{\sigma_\varrho}\, u^*\sqrt{\sigma_\nu}$. But then
${{\mathsf{tr}}}\,b a^*={{\mathsf{tr}}}\,\sqrt{\sigma_\varrho}\, u^*\sqrt{\sigma_\nu}= {{\mathsf{tr}}}\,\sqrt{\sigma_\nu}\sqrt{\sigma_\varrho}\, u^*={{\mathsf{tr}}}\,u|\sqrt{\sigma_\nu}\sqrt{\sigma_\varrho}|\, u^*={{\mathsf{tr}}}\,u^*u|\sqrt{\sigma_\nu}\sqrt{\sigma_\varrho}|={{\mathsf{tr}}}\,|\sqrt{\sigma_\nu}\sqrt{\sigma_\varrho}|$. Thus \eqref{nvspez2} can be accomplished in each case of a separable ${\mathcal{H}}$.
\end{proof}
By means of Lemma \ref{nvbas} formula \eqref{bei1} in the case at hand will follow, with the extra feature that the supremum in the definition of $F$ in fact is a maximum.
\begin{corolla}\label{0Bop}
Suppose $\varrho,\nu\in {\mathcal S}_0({\mathsf B}({\mathcal H}))$, $a\in {\mathcal S}_{\pi,{\mathsf B}({\mathcal H})}(\nu)$, $b\in {\mathcal S}_{\pi,{\mathsf B}({\mathcal H})}(\varrho)$. Then
\begin{equation}\label{equi0}
 F({\mathsf B}({\mathcal H})|\varrho,\nu)={{\mathsf{tr}}}\,|a^*b\,|=\max_{\tilde{a}\in {\mathcal S}_{\pi,{\mathsf B}({\mathcal H})}(\nu),\,\tilde{b}\in {\mathcal S}_{\pi,{\mathsf B}({\mathcal H})}(\varrho)} |{{\mathsf{tr}}}\,\tilde{a}^*\tilde{b} |
\end{equation}
\end{corolla}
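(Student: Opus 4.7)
The plan is to read off Corollary \ref{0Bop} essentially for free from Lemma \ref{nvbas}, once the ambient $^*$-representation has been fixed. By Remark \ref{rem1}\,\eqref{rem10} (whose substance is Lemma \ref{bas3}\,\eqref{bas3aa}), the fidelity $F({\mathsf B}({\mathcal H})|\varrho,\nu)$ may be computed within any single unital $^*$-representation for which both fibres are non-void. Since the $^*$-representation $\{\pi,{\mathsf{H.S.}}({\mathcal H})\}$ introduced in \eqref{hs0} does the job (every normal state has a non-void $\pi$-fibre by \eqref{nfibre}), Definition \ref{genprob.2} rewrites as
\[
F({\mathsf B}({\mathcal H})|\varrho,\nu)=\sup_{\tilde a\in{\mathcal S}_{\pi,{\mathsf B}({\mathcal H})}(\nu),\,\tilde b\in{\mathcal S}_{\pi,{\mathsf B}({\mathcal H})}(\varrho)}|\langle \tilde b,\tilde a\rangle_{{\mathsf{H.S.}}}|=\sup_{\tilde a,\tilde b}|{\mathsf{tr}}\,\tilde b\,\tilde a^*|,
\]
and Lemma \ref{nvbas} identifies this supremum with ${\mathsf{tr}}\,|\sqrt{\sigma_\nu}\sqrt{\sigma_\varrho}|$. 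The lemma in fact shows the supremum is attained, hence it is a maximum, which already yields the third equality of \eqref{equi0}.

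The remaining task is to prove ${\mathsf{tr}}\,|a^*b|={\mathsf{tr}}\,|\sqrt{\sigma_\nu}\sqrt{\sigma_\varrho}|$ for \emph{any} pair $a\in{\mathcal S}_{\pi,{\mathsf B}({\mathcal H})}(\nu)$, $b\in{\mathcal S}_{\pi,{\mathsf B}({\mathcal H})}(\varrho)$, i.e.~that $|a^*b|$ has a trace which is independent of the choice of implementing vectors. Using \eqref{nfibre}, write $a=\sqrt{\sigma_\nu}\,v^*$ and $b=\sqrt{\sigma_\varrho}\,w^*$ with partial isometries $v,w$ satisfying $v^*v=s(\nu)$, $w^*w=s(\varrho)$. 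Then
\[
(a^*b)^*(a^*b)=w\sqrt{\sigma_\varrho}\,v^*v\,\sigma_\nu\,v^*v\sqrt{\sigma_\varrho}\,w^*=w\sqrt{\sigma_\varrho}\,\sigma_\nu\sqrt{\sigma_\varrho}\,w^*,
\]
where $s(\nu)\sqrt{\sigma_\nu}=\sqrt{\sigma_\nu}$ was used. On the other hand, $|\sqrt{\sigma_\nu}\sqrt{\sigma_\varrho}|^{2}=\sqrt{\sigma_\varrho}\,\sigma_\nu\sqrt{\sigma_\varrho}$, and $s(|\sqrt{\sigma_\nu}\sqrt{\sigma_\varrho}|)\leq s(\sqrt{\sigma_\varrho})=s(\varrho)$. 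Hence $w|\sqrt{\sigma_\nu}\sqrt{\sigma_\varrho}|w^{*}$ is positive with square $w|\sqrt{\sigma_\nu}\sqrt{\sigma_\varrho}|^{2}w^{*}=(a^*b)^*(a^*b)$, so by uniqueness of the positive square root,
\[
|a^*b|=w\,|\sqrt{\sigma_\nu}\sqrt{\sigma_\varrho}|\,w^{*}.
\]
Taking the trace and using the cyclicity together with $w^{*}w=s(\varrho)$ and $s(\varrho)|\sqrt{\sigma_\nu}\sqrt{\sigma_\varrho}|=|\sqrt{\sigma_\nu}\sqrt{\sigma_\varrho}|$, one gets ${\mathsf{tr}}\,|a^*b|={\mathsf{tr}}\,|\sqrt{\sigma_\nu}\sqrt{\sigma_\varrho}|$, as wanted.

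Combining the two paragraphs, both equalities of \eqref{equi0} follow, and the fact that the maximum is attained is inherited from Lemma \ref{nvbas}. The only delicate point is the support bookkeeping $s(|\sqrt{\sigma_\nu}\sqrt{\sigma_\varrho}|)\leq s(\varrho)$ needed in the second step; once this is noted the computation is purely algebraic. Note that no new harmonic analysis or representation theory is required beyond Lemma \ref{nvbas} and the observation from Remark \ref{rem1}\,\eqref{rem10} that the representation chosen to compute $F$ is immaterial.
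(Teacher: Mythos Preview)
Your argument is correct. Both equalities in \eqref{equi0} follow cleanly: the restriction of the supremum in Definition~\ref{genprob.2} to the single representation $\{\pi,{\mathsf{H.S.}}({\mathcal H})\}$ is justified by Lemma~\ref{bas3}\,\eqref{bas3aa} (via the chain $F=\sup_u|\langle u\psi,\varphi\rangle_\pi|\leq\sup_{\tilde a,\tilde b}|\langle\tilde b,\tilde a\rangle_\pi|\leq F$), and your support computation $s(|\sqrt{\sigma_\nu}\sqrt{\sigma_\varrho}|)\leq s(\varrho)$ together with $w^*w=s(\varrho)$ makes the identity $|a^*b|=w|\sqrt{\sigma_\nu}\sqrt{\sigma_\varrho}|w^*$ and the trace equality watertight.

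The paper proceeds differently at the key step. Rather than showing that ${\mathsf{tr}}\,|a^*b|$ is constant along the fibres by your unitary-conjugation argument, it computes the functional norm $\|h_{b,a}^\pi\|_1$ directly on the commutant $\pi({\mathsf B}({\mathcal H}))'$ (realized as right multiplications): using the polar decomposition $a^*b=u|a^*b|$, one rewrites $h_{b,a}^\pi(Z)=\langle ZU\sqrt{u|a^*b|u^*},\sqrt{u|a^*b|u^*}\rangle_{\mathsf{H.S.}}$ with $U$ the right multiplication by $u$, from which $\|h_{b,a}^\pi\|_1={\mathsf{tr}}\,|a^*b|$ follows, and then Lemma~\ref{bas3}\,\eqref{bas3aa} gives $F={\mathsf{tr}}\,|a^*b|$. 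Your route is more elementary and avoids this functional analysis on the commutant; the paper's route, however, has the side benefit of explicitly identifying the partial isometry $v_{b,a}^\pi$ of the polar decomposition of $h_{b,a}^\pi$ (see Remark~\ref{polspez}), which is used later in Example~\ref{Bop}\,\eqref{equi3}. So your argument suffices for the corollary itself, but if you need $v_{b,a}^\pi$ downstream you would have to supply it separately.
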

\begin{proof}
To see \eqref{equi0}, let us consider the linear form $ h_{b,a}^\pi$ given over $\pi({\mathsf B}({\mathcal H}))^{\,\prime}$ by  $$\pi({\mathsf B}({\mathcal H}))^{\,\prime}\ni Z\longmapsto h_{b,a}^\pi(Z)=\langle Z\,b,a\rangle_{{\mathsf{ H.S.}}}$$  Let $a^*b=u|a^*b|$ be the polar decomposition of $a^*b$, and be $U$ the partial isometry in $\pi({\mathsf B}({\mathcal H}))^{\,\prime}$ implemented by multiplication of Hilbert-Schmidt operators with $u$ from the right. Then, $U^*U$ is implemented by multiplication from the right with the support orthoprojection $uu^*$ of the trace class operator $u|a^*b| u^*$ within ${\mathsf B}({\mathcal H})$, and by which  the support of   $\sqrt{u|a^*b| u^*}\in {\mathsf{H.S.}}({\mathcal H})$ with respect of $\pi({\mathsf B}({\mathcal H}))^{\,\prime}$ is implemented. Hence, for each $Z\in \pi({\mathsf B}({\mathcal H}))^{\,\prime}$ one can conclude as follows:
\begin{eqnarray*}
h_{b,a}^\pi(Z)&=& \langle Z\,b,a\rangle_{{\mathsf{ H.S.}}}= \langle b z,a\rangle_{{\mathsf{ H.S.}}} ={{\mathsf{tr}}}\, b z a^* ={{\mathsf{tr}}}\, a^*b  z ={{\mathsf{tr}}}\, u|a^*b| z \\
&=& {{\mathsf{tr}}}\, u|a^*b| u^*u z ={{\mathsf{tr}}}\, \sqrt{u|a^*b| u^*} u z (\sqrt{u|a^*b| u^*})^*\\
&=& \bigl\langle \sqrt{u|a^*b| u^*}\, u z, \sqrt{u|a^*b| u^*}\bigr\rangle_{{\mathsf{ H.S.}}}\\
&=& \bigl\langle Z\,U\sqrt{u|a^*b| u^*}, \sqrt{u|a^*b| u^*}\bigr\rangle_{{\mathsf{ H.S.}}}
\end{eqnarray*}
Since $u^*u=s(|a^*b|)$ holds, from this obviously the following estimate is obtained $$\|h_{b,a}^\pi\|_1\leq \bigl\langle \sqrt{u|a^*b| u^*}, \sqrt{u|a^*b| u^*}\bigr\rangle_{{\mathsf{ H.S.}}}={{\mathsf{tr}}}\,u|a^*b| u^*={{\mathsf{tr}}}\,u^*u|a^*b|={{\mathsf{tr}}}\,|a^*b|$$
On the other hand, since $h_{b,a}^\pi(U^*)=\bigl\langle \sqrt{u|a^*b| u^*}, \sqrt{u|a^*b| u^*}\bigr\rangle_{{\mathsf{ H.S.}}}$ and $\|U^*\|=1$ hold, the upper bound ${{\mathsf{tr}}}\,|a^*b|$ is attained. Hence we find that $\|h_{b,a}^\pi\|_1={{\mathsf{tr}}}\,|a^*b|$ holds and which implies $ F({\mathsf B}({\mathcal H})|\varrho,\nu)={{\mathsf{tr}}}\,|a^*b|$, by Lemma \ref{bas3}\,\eqref{bas3aa}. The latter has to be true for each choice of $a\in {\mathcal S}_{\pi,{\mathsf B}({\mathcal H})}(\nu)$, $b\in {\mathcal S}_{\pi,{\mathsf B}({\mathcal H})}(\varrho)$. Thus, especially also
$
F({\mathsf B}({\mathcal H})|\varrho,\nu)={{\mathsf{tr}}}\,|a^*b|={{\mathsf{tr}}}\,|\sqrt{\sigma_\nu}\sqrt{\sigma_\varrho}\,|
$. By Lemma \ref{nvbas} this implies
\eqref{equi0}.
\end{proof}
\begin{remark}\label{polspez}
 In terms of the notations of \ref{basfa}, $U$ introduced in the proof of Corollary \ref{0Bop} equals the partial isometry $v_{b,a}^\pi\in \pi({\mathsf B}({\mathcal H}))^{\,\prime}$ of the polar decomposition of $h_{b,a}^\pi$.
\end{remark}
\subsubsection{The hypothesis of the intermediate faithfulness}\label{beimatrix3}
For faithful normal states  with $\varrho\not=\nu$, and in referring to the density operators $\sigma_\nu$ and $\sigma_\varrho$ of the states in question, we are going to exemplify Lemma \ref{supp5}, now. In the notations of Example \ref{Bop}, let $a, b$ be chosen as $a=\sqrt{\sigma_\nu}$, $b=\sqrt{\sigma_\varrho}$, and define the normal positive linear form $\omega$ satisfying
\begin{equation}\label{fquest}
\xi(\sqrt{\sigma_\nu})+ \sqrt{\sigma_\nu}\in {\mathcal S}_{\pi,{\mathsf B}({\mathcal H})}(\omega)
\end{equation}
According to  Example \ref{Bop}\,\eqref{equi2}, since $s(\nu)=s(\varrho)={\mathsf 1}$ holds we have $\varrho\dashv\vdash \nu$ and the partial isometry $u$ of Example \ref{Bop}\,\eqref{equi3} is unitary.
Then, if $x\in {\mathsf{H.S.}}({\mathcal H})$ is given by
\begin{subequations}\label{fquest0}
\begin{equation}\label{fquest1}
x=\sqrt{\sigma_\varrho} u^*+ \sqrt{\sigma_\nu}
\end{equation}
from \eqref{fquest} we get $\omega(\cdot)={{\mathsf{tr}}}\, \sigma_\omega (\cdot)$, with positive trace-class operator $\sigma_\omega$ given by
\begin{equation}\label{fquest2}
\sigma_\omega=x x^*=|x^*|^2
\end{equation}
\end{subequations}
\begin{remark}\label{invarc}
Note that equally well $\sigma_\omega=|y^*|^2$ holds, with $y$ represented by
$$y=x u=\xi(\sqrt{\sigma_\varrho})+\sqrt{\sigma_\varrho}=\sqrt{\sigma_\nu} u+\sqrt{\sigma_\varrho}\in {\mathcal S}_{\pi,{\mathsf B}({\mathcal H})}(\omega) $$
which formally is arising by a r\^{o}le reversal of $\nu$ with $\varrho$ in \eqref{fquest1}. Note that the latter requires substituting $u$ with $u^*$ for, the polar decomposition of $\sqrt{\sigma_\varrho} \sqrt{\sigma_\nu}$ reads
\begin{equation}\label{polzrev}
\sqrt{\sigma_\varrho}\sqrt{\sigma_\nu}=u^*\bigl|\sqrt{\sigma_\varrho}\sqrt{\sigma_\nu}\,\bigr|
\end{equation}
\end{remark}
In respect of $\{\pi,{\mathcal{H}}_\pi\}$, the question of interest will be whether or not by the affine superposition \eqref{fquest} of the special two implementing vectors of the given faithful normal states $\nu$, $\varrho$ a faithful $\omega$ is implemented again. In case of an affirmative answer, we will say that the {\em hypothesis of the intermediate faithfulness} is satisfied in respect of $\nu$, $\varrho$.

Relating the latter, note that in view of $x$ and $\sigma_\omega$ defined in \eqref{fquest}  we have  $s(\omega)=s(\sigma_\omega)=s(|x^*|)$, and if the polar decomposition of $x$ reads
\begin{subequations}\label{ifl0}
\begin{equation}\label{ifl1}
x=w|x|
\end{equation}
then $|x^*|=w |x| w^*$ and $s(\omega)=s(|x^*|)=w w^*$, $s(|x|)=w^* w$. Since both $\sigma_\nu$ and  $\sigma_\varrho$ are density operators of full support and $\sqrt{\sigma_\nu} \sqrt{\sigma_\varrho}=u |\sqrt{\sigma_\nu} \sqrt{\sigma_\varrho}|$ holds, we get
\begin{eqnarray*}
|x|^2=x^* x &=& u \sigma_\varrho u^*+ \sigma_\nu + u \sqrt{\sigma_\varrho} \sqrt{\sigma_\nu} + \sqrt{\sigma_\nu} \sqrt{\sigma_\varrho} u^*\\
&= & u \sigma_\varrho u^*+ \sigma_\nu +  2  u |\sqrt{\sigma_\nu} \sqrt{\sigma_\varrho}| u^*
\end{eqnarray*}
Obviously $|x|^2$ is of full support. Hence, from this in view of the above,
\begin{equation}\label{if1}
  w w^*=s(\omega),\ w^*w={\mathsf{1}}
\end{equation}
\end{subequations}
follows, that is, $w$ is an isometry accomplishing the equivalence $s(\omega)\sim {\mathsf{1}}$ in the sense of the $vN$-comparability of orthoprojections. Therefore, as a consequence of \eqref{ifl0} we may note the following criterion.
\begin{lemma}\label{if}
Intermediate faithfulness holds if, and only if, $w$ is unitary.
\end{lemma}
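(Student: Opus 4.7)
The plan is to observe that the assertion of Lemma \ref{if} follows almost immediately from the relations \eqref{if1} that have just been established for the partial isometry $w$ arising in the polar decomposition \eqref{ifl1} of $x$. Indeed, the hypothesis of intermediate faithfulness was defined in the paragraph preceding the lemma as the requirement that the normal positive linear form $\omega$ defined through \eqref{fquest}--\eqref{fquest2} be again faithful, i.e.\ that its support obey $s(\omega)={\mathsf 1}$.

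Accordingly, the proof proceeds in two short steps. First, I would recall from \eqref{if1} the identities
\[
w^*w={\mathsf 1},\qquad ww^*=s(\omega),
\]
both of which have already been derived by analysing $|x|^2=x^*x$ via the polar decomposition $\sqrt{\sigma_\nu}\sqrt{\sigma_\varrho}=u\,|\sqrt{\sigma_\nu}\sqrt{\sigma_\varrho}\,|$ and exploiting that $\sigma_\nu,\sigma_\varrho$ are of full support. Second, since $w^*w={\mathsf 1}$ holds unconditionally, unitarity of $w$ is equivalent to $ww^*={\mathsf 1}$, and in view of $ww^*=s(\omega)$ this is in turn equivalent to $s(\omega)={\mathsf 1}$, that is, to faithfulness of $\omega$. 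Both implications of the claimed equivalence then follow.

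There is essentially no main obstacle here: the work has been done in setting up \eqref{ifl0}--\eqref{if1}. The only thing worth double-checking in the write-up is that $s(\omega)$ really coincides with $ww^*$, which is immediate from $\sigma_\omega=xx^*=w|x|^2w^*$ together with the fact that $|x|^2$ is invertible on the appropriate subspace. After that, the lemma reduces to the algebraic triviality that for an isometry $w$, being unitary is the same as having full range projection.
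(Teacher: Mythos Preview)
Your proposal is correct and follows exactly the paper's reasoning: the paper does not give a separate proof for this lemma but presents it as an immediate consequence of the relations \eqref{if1}, namely $w^*w={\mathsf 1}$ and $ww^*=s(\omega)$, from which unitarity of $w$ is equivalent to $s(\omega)={\mathsf 1}$. Your remark about verifying $s(\omega)=ww^*$ via $\sigma_\omega=w|x|^2w^*$ and the full support of $|x|^2$ simply retraces how the paper obtained \eqref{if1} in the first place.
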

Apply this to identify some further situations where the fulfilment of the hypothesis can be settled rather elementary.
\begin{example}\label{arcsupp}
For $\omega$ and $\sigma_\omega$ given by \eqref{fquest} and \eqref{fquest2} the following hold:
\begin{enumerate}
  \item\label{arcsupp1}
  if $\sigma_\omega$ is of finite rank, then ${\mathsf{dim}}\, {\mathcal{H}} <\infty$;
  \item\label{arcsupp2}  if ${\mathsf{dim}}\, {\mathcal{H}}<\infty$ is fulfilled, then $\sigma_\omega$ is of full rank;
  \item\label{arcsupp3} if $\sigma_\nu \smile \sigma_\varrho$ holds, then $\sigma_\omega$ is of full rank.
\end{enumerate}
\end{example}
\begin{proof}
In view of $s(\omega)\sim {\mathsf{1}}$ both \eqref{arcsupp1} and \eqref{arcsupp2} follow from elementary dimension theory in ${\mathsf{B}}({\mathcal{H}})$. Relating \eqref{arcsupp3}, remark that for  mutually commuting density operators $\sigma_\nu$ and $\sigma_\varrho$ obviously $u={\mathsf{1}}$ is fulfilled. Hence, as a sum of two strictly positive trace-class operators $x$ has to be strictly positive, too. Thus $\sigma_\omega=x^2$ is of full support.
\end{proof}
\begin{example}\label{Bop1}
Let $\varrho$ and $\nu$ be faithful normal states such that there exists a finite or countably infinite, strictly  ascending, directed system $\{P_k\}$ of orthoprojections over ${\mathcal H}$ with ${\mathsf{dim}}\, P_k {\mathcal H}<\infty$, for all $k$, and obeying the following two conditions:
\begin{enumerate}
\item\label{Bop1a}
${\mathsf{l.u.b.}} \{P_k\}={\mathsf{1}}$;
\item\label{Bop1b}
$\sigma_\nu \smile P_k$, $\sigma_\varrho  \smile P_k$, for all $k$.
\end{enumerate}
Then, the normal positive linear form $\omega$ implemented in accordance with formula \eqref{fquest}
is faithful, and is obeying  $\sigma_\omega \smile P_k $  again, for all $k$.
\end{example}
\begin{proof}
Let $\Delta P_1=P_1$ and $\Delta P_k=P_k-P_{k-1}$, for $k> 1$. By \eqref{Bop1a} $\{\Delta P_k\}$ is a finite or countably infinite decomposition of the unity into mutually orthogonal orthoprojections $\Delta P_k$ of finite rank. Note that \eqref{Bop1b} implies that each $\Delta P_k$ is commuting with the unitary $u$ figuring in \eqref{fquest1}, too. Hence, $\Delta P_k x=x \Delta P_k$, for each $k$. But then each $\Delta P_k$ is commuting with $|x|$, $w$ and $|x^*|$.  By \eqref{fquest2} this implies commutation of $\sigma_\omega$ with each $P_k$. Also, according to \eqref{if1},  $w_k=w\Delta P_k=\Delta P_k w$ is a partial isometry obeying $w_k w_k^*=w w^*\Delta P_k \leq w^*w \Delta P_k= w_k^* w_k=\Delta P_k$, for all $k$. Since each $\Delta P_k$ is of finite rank, the latter estimate implies  $w^*w \Delta P_k=w w^* \Delta P_k=\Delta P_k$, for each subscript. Hence, by summing up over $k$ this yields $w w^*=w^* w={\mathsf{1}}$. That is, under the given conditions $w$ proves to be unitary. By Lemma \ref{if} then $s(\omega)={\mathsf{1}}$ follows.
\end{proof}
\begin{corolla}\label{pobei00}
The premises of \textup{Example \ref{Bop1}} are satisfied if, and only if, $\sigma_\nu\smile \sigma$ and $\sigma_\varrho\smile \sigma$ are fulfilled, with some density operator $\sigma$ of full support. Especially, for ${\mathsf{dim}}\,{\mathcal{H}}<\infty$, the premises are satisfied for any two faithful normal states $\nu$, $\varrho$.
\end{corolla}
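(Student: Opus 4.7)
The plan is to argue the two implications separately, and then derive the finite-dimensional addendum as a trivial corollary of the sufficient direction.

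For the direction from the commutation hypothesis to the existence of $\{P_k\}$, I would start with a density operator $\sigma$ of full support satisfying $\sigma_\nu\smile \sigma$ and $\sigma_\varrho\smile \sigma$, and exploit its spectral decomposition. Since $\sigma$ is a positive trace-class operator, its nonzero spectrum consists of a (finite or countable) sequence of strictly positive eigenvalues with finite-dimensional eigenprojections $\{p_n\}$, and the assumption that $\sigma$ has full support translates into $\mathsf{l.u.b.}\,\bigl\{\sum_{n\leq k}p_n\bigr\}={\mathsf 1}$. Defining $P_k=\sum_{n\leq k}p_n$ then yields a strictly ascending directed system of finite-rank orthoprojections verifying condition \eqref{Bop1a} of Example \ref{Bop1}. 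Since each $P_k$ lies in the abelian von Neumann subalgebra generated by $\sigma$, commutation of $\sigma_\nu$ and $\sigma_\varrho$ with $\sigma$ yields $\sigma_\nu\smile P_k$ and $\sigma_\varrho\smile P_k$ for every $k$, which is precisely condition \eqref{Bop1b}.

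For the converse, assume a system $\{P_k\}$ as in Example \ref{Bop1} is given and set $\Delta P_1=P_1$, $\Delta P_k=P_k-P_{k-1}$ for $k>1$. By strict ascension and \eqref{Bop1a} these are nonzero finite-rank mutually orthogonal orthoprojections summing (in the sense of $\mathsf{l.u.b.}$) to ${\mathsf 1}$. Choose strictly positive reals $\lambda_k$ with $\sum_k\lambda_k=1$ and define
\[
\sigma=\sum_k \frac{\lambda_k}{\mathsf{dim}\,\Delta P_k {\mathcal H}}\,\Delta P_k.
\]
The series converges in trace norm, $\mathsf{tr}\,\sigma=1$, and full support follows from $\mathsf{l.u.b.}_k\,P_k={\mathsf 1}$. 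Commutation with both $\sigma_\nu$ and $\sigma_\varrho$ is inherited from the respective commutations with each $P_k$, since $\sigma$ lies in the abelian von Neumann algebra generated by $\{\Delta P_k\}$, which coincides with the abelian von Neumann algebra generated by $\{P_k\}$.

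For the second assertion, if $\mathsf{dim}\,{\mathcal H}<\infty$ the one-element system $P_1={\mathsf 1}$ trivially satisfies all the premises of Example \ref{Bop1} for any pair of faithful normal states, or equivalently $\sigma=(1/\mathsf{dim}\,{\mathcal H})\,{\mathsf 1}$ serves as a central density operator of full support. The only mildly delicate point in the argument is making sure strictness of the ascension and finite-dimensionality of the $\Delta P_k$ simultaneously hold in the spectral-decomposition step, but this is automatic because spectral eigenprojections of a trace-class operator are by construction nonzero and of finite rank.
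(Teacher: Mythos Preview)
Your proof is correct and follows essentially the same approach as the paper: one direction uses the spectral decomposition of $\sigma$ to produce the finite-rank projections $P_k$, the other builds $\sigma$ as a trace-normalized positive combination of the $\Delta P_k$, and the finite-dimensional case is handled by $\sigma=(1/{\mathsf{dim}}\,{\mathcal H})\,{\mathsf 1}$. The only cosmetic difference is that the paper chooses the coefficients $\lambda_k$ strictly decreasing so that the spectral projections of $\sigma$ are exactly the $\Delta P_k$, whereas you do not impose this; your version is still correct since all that is needed is that $\sigma$ lie in the abelian von Neumann algebra generated by $\{P_k\}$.
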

\begin{proof}
Suppose $\{P_k\}$ as in Example \ref{Bop1}. Let $\{\lambda_k\}$, with $\lambda_1> \lambda_2> \lambda_3 > \cdots >0$, be a sequence of reals and $\sum_k \lambda_k {\mathsf{dim}} \,\Delta P_k=1$. Then  $\sigma= \sum_k (\lambda_k-\lambda_{k+1})P_k=\sum_k \lambda_k \Delta P_k $ is a faithful density operator obeying $\sigma_\nu\smile \sigma$ and $\sigma_\varrho\smile \sigma$. On the other hand, let  $\sigma=\sum_k \lambda_k q_k $ be the spectral representation of a faithful density operator, with sequence $\{\lambda_k\}$ of different proper values and decomposition $\{q_k\}$ of the unity into the mutually orthogonal orthoprojections $q_k$ projecting onto the characteristic subspaces $q_k{\mathcal{H}}$ to the corresponding proper values $\lambda_k$. By spectral theory,  $\sigma_\nu\smile \sigma$ and $\sigma_\varrho\smile \sigma$ imply $\sigma_\nu\smile q_k$ and  $\sigma_\varrho\smile q_k$, for each $k$, and $P_k=\sum_{j\leq k} q_j$ can be taken in
Example \ref{Bop1}. Obviously, for ${\mathsf{dim}}\,{\mathcal{H}}<\infty$, $\sigma=(1/{\mathsf{dim}}\,{\mathcal{H}})\,{\mathsf{1}}$ can be chosen, for each $\nu$ and $\varrho$.
\end{proof}
Whereas according to Corollary \ref{pobei00} for ${\mathsf{dim}}\, {\mathcal{H}}<\infty$ the hypothesis of the intermediate faithfulness is true in any case of two faithful normal states, in
the infinite dimensional case the premises of Example \ref{Bop1} can fail to be satisfied.
\begin{example}\label{Bop20}
Suppose ${\mathcal{H}}$ is separable with ${\mathsf{dim}}\,{\mathcal{H}}=\infty$. Let $\varrho$ be a faithful normal state over ${\mathsf{B}}({\mathcal{H}})$. Let $\sigma_\varrho=\sum_k \varrho_k q_k$ be the spectral representation of the corresponding density operator $\sigma_\varrho$, where the $\varrho_k$'s are the different proper values of $\sigma_\varrho$ and the $q_k$'s are the corresponding spectral projections. Let $\psi_1$, $\psi_2$ be unit vectors with $\psi_1\in {\mathcal{R}}(\sigma_\varrho)$ and $\psi_2\in {\mathcal{R}}(\sqrt{\sigma_\varrho})$, and obeying $q_k\psi_j\not=0$, for all $k\in {\mathbb{N}}$ and $j=1,2$. Let $p_1$, $p_2$ be the minimal orthoprojections such that $p_j\psi_j=\psi_j$, for $j=1,2$. Consider those faithful normal states $\nu$ which are given by the following  proportionalities:
\begin{enumerate}
  \item\label{Bop20a} $\sigma_\nu\sim \sigma_\varrho + \lambda p_1$, for fixed real $\lambda> -1/\langle \sigma_\varrho^{-1} \psi_1,\psi_1\rangle$;
  \item\label{Bop20b} $\sqrt{\sigma_\nu}\sim \sqrt{\sigma_\varrho}+\lambda p_2$, for fixed real $\lambda> -1/\bigl\langle \sqrt{\sigma_\varrho}^{-1} \psi_2,\psi_2\bigr\rangle$.
\end{enumerate}
In both cases, the premises of Example \ref{Bop1} do not apply in respect of $\nu$ and $\varrho$. Moreover, for any $\varepsilon>0$, the respective $\lambda$ can be chosen such that $\|\varrho-\nu\|_1<\varepsilon$.
\end{example}
\begin{proof}
In order to see that for $\nu$ and $\varrho$ in  accordance with \eqref{Bop20a} or \eqref{Bop20b} the premises of Example \ref{Bop1} cannot be satisfied, apply Corollary \ref{hilfcomm} from the appendix,  with $\Lambda=\sigma_\varrho$ and $\Lambda=\sqrt{\sigma_\varrho}$, respectively. The last assertion follows since $\sigma_\nu$ is continuously depending on the parameter $\lambda$, and $\lambda=0$ is an inner point of the respective parameter range.
\end{proof}
\subsubsection{When does the hypothesis of the intermediate faithfulness hold?}\label{beimatrix4}
In operator theoretical notions, the hypothesis of the intermediate faithfulness is equivalent to the assertion that the left support of $x$ be full, that is, $\ell(x)={\mathsf{1}}$. The latter is the same as asserting the range ${\mathcal{R}}(x)$ of $x$ to be dense in ${\mathcal{H}}$. Therefore, if a linear operator $a_0$ operating from the dense linear submanifold ${\mathcal{D}}(a_0)=\sqrt{\sigma_\nu}{\mathcal{H}}$  into ${\mathcal{H}}$ is defined as
\begin{subequations}\label{a0def}
\begin{equation}\label{a0mapneu}
  a_0:\ {\mathcal{D}}(a_0)\ni \sqrt{\sigma_\nu}\varphi\longmapsto \sqrt{\sigma_\varrho}\,u^*\varphi
\end{equation}
then we have ${\mathcal{R}}(x)=\bigl(\sqrt{\sigma_\varrho} u^*+ \sqrt{\sigma_\nu}\bigr){\mathcal{H}}=(a_0+{\mathsf{1}}) {\mathcal{D}}(a_0)={\mathcal{R}}(a_0+{\mathsf{1}})$. Also, it is easily inferred that instead of \eqref{a0mapneu} equally well
\begin{equation}\label{essa3neu}
a_0=\sqrt{\sigma_\varrho}\,u^*\left(\sqrt{\sigma_\nu}\right)^{-1}
\end{equation}
\end{subequations}
can be written. Since the inverse of $\sqrt{\sigma_\nu}$ is self-adjoint over ${\mathcal{D}}(a_0)$ and $\sqrt{\sigma_\varrho}\,u^*$ is bounded, from  \eqref{essa3neu} by standard procedures the data of the adjoint read as
\begin{subequations}\label{essa4neu}
\begin{eqnarray}
\label{essa4aneu}
  a_0^* &=& \left(\sqrt{\sigma_\nu}\right)^{-1} u \sqrt{\sigma_\varrho} \\
\label{essa4bneu}
{\mathcal{D}}(a_0^*) &=& \Bigl\{\xi\in {\mathcal{H}}: u\sqrt{\sigma_\varrho}\, \xi\in {\mathcal{D}}(a_0) \Bigr\}
\end{eqnarray}
\end{subequations}
Also, from \eqref{polzrev} and since  $\bigl|\sqrt{\sigma_\varrho}\sqrt{\sigma_\nu}\,\bigr|$ is self-adjoint, we infer that
\begin{equation}\label{a0x01}
u\sqrt{\sigma_\varrho}\sqrt{\sigma_\nu}=\bigl|\sqrt{\sigma_\varrho}\sqrt{\sigma_\nu}\,\bigr|=\sqrt{\sigma_\nu}\sqrt{\sigma_\varrho}u^*
\end{equation}
Accordingly, for each $\varphi\in {\mathcal{H}}$ we have $$ u\sqrt{\sigma_\varrho} \bigl(\sqrt{\sigma_\nu}\,\varphi\bigr)=\bigl|\sqrt{\sigma_\varrho} \sqrt{\sigma_\nu}\bigr|\,\varphi=\sqrt{\sigma_\nu} \bigl(\sqrt{\sigma_\varrho} u^*\varphi\bigr)$$
Hence ${\mathcal{D}}(a_0)\subset {\mathcal{D}}(a_0^*)$ and $a_0^* \sqrt{\sigma_\nu}\,\varphi=\sqrt{\sigma_\varrho} u^*\varphi=a_0 \sqrt{\sigma_\nu}\,\varphi$ are seen. Thus $a_0^*\supset a_0$ holds, which means that $a_0$ is a symmetric linear operator. Also, for each $\varphi\in {\mathcal{H}}$
\begin{eqnarray*}
  \langle a_0 \sqrt{\sigma_\nu}\,\varphi,\sqrt{\sigma_\nu}\,\varphi\rangle &=& \langle \sqrt{\sigma_\varrho}\,u^*\varphi,\sqrt{\sigma_\nu}\,\varphi\rangle=\langle \varphi,u\sqrt{\sigma_\varrho}\sqrt{\sigma_\nu}\,\varphi\rangle \\
   &=& \langle \varphi,|\sqrt{\sigma_\varrho}\sqrt{\sigma_\nu}|\,\varphi\rangle=\langle |\sqrt{\sigma_\varrho}\sqrt{\sigma_\nu}|\,\varphi,\varphi\rangle \geq  0
\end{eqnarray*}
holds. Thus $a_0$ is positive symmetric. As mentioned above,  $[{\mathcal{R}}(a_0+{\mathsf{1}})]={\mathcal{H}}$ is fulfilled if, and only if,  the hypothesis of the intermediate faithfulness is satisfied. But by the facts mentioned in Remark \ref{saka6}, for a densely defined, positive linear operator $a_0$ the condition $[{\mathcal{R}}(a_0+{\mathsf{1}})]={\mathcal{H}}$ equivalently means that $a_0$ has to be essentially self-adjoint. The latter proves that \eqref{supp5full1} and \eqref{supp5full2} of the following result are mutually equivalent.
\begin{lemma}\label{supp5full}
Suppose $\sigma_\nu, \sigma_\varrho$ are of full rank. Then, $a_0$ is a positive symmetric operator and the following are mutually equivalent:
\begin{enumerate}
  \item \label{supp5full1}
  the hypothesis of the intermediate faithfulness is satisfied in respect of $\nu$, $\varrho$;
  \item \label{supp5full2}
  $a_0$ is essentially self-adjoint.
\end{enumerate}
Moreover, in case \eqref{supp5full2} holds, the closure $\bar{a}_0$ is the unique solution $y$ of the equation
\begin{equation}\label{supp5full3}
\sqrt{\sigma_\varrho}\,u^*= y \sqrt{\sigma_\nu}
\end{equation}
provided $y$ is taken from the set of all densely defined,  self-adjoint linear operators.
\end{lemma}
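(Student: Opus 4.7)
\textbf{Plan of proof for Lemma \ref{supp5full}.}

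The bulk of the work has already been performed in the discussion preceding the statement: the computations \eqref{essa3neu}--\eqref{a0x01} show that $a_0$ is densely defined with $\mathcal{D}(a_0) = \sqrt{\sigma_\nu}\,\mathcal{H}$, that $a_0 \subseteq a_0^*$, and that
\[
\langle a_0\sqrt{\sigma_\nu}\varphi,\sqrt{\sigma_\nu}\varphi\rangle = \bigl\langle \bigl|\sqrt{\sigma_\varrho}\sqrt{\sigma_\nu}\,\bigr|\varphi,\varphi\bigr\rangle \geq 0,
\]
so I would just quote these to obtain the asserted positivity and symmetry of $a_0$. For the equivalence \eqref{supp5full1}$\Leftrightarrow$\eqref{supp5full2}, I would note that by definition the hypothesis of intermediate faithfulness says $s(\omega) = {\mathsf 1}$, which (since $\omega$ is implemented by the vector $x$ of \eqref{fquest1}) is equivalent to $[\mathcal{R}(x)] = \mathcal{H}$. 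Because $\mathcal{R}(x) = (a_0 + {\mathsf 1})\mathcal{D}(a_0) = \mathcal{R}(a_0 + {\mathsf 1})$, this amounts to density of $\mathcal{R}(a_0 + {\mathsf 1})$, and by the standard criterion recorded in Remark \ref{saka6} this is precisely essential self-adjointness of the positive symmetric operator $a_0$.

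For the uniqueness statement, I would proceed as follows. Suppose $y$ is any densely defined self-adjoint operator with $\sqrt{\sigma_\varrho}\,u^* = y\sqrt{\sigma_\nu}$. Reading this identity pointwise, it says that for every $\varphi \in \mathcal{H}$, $\sqrt{\sigma_\nu}\varphi \in \mathcal{D}(y)$ and $y\sqrt{\sigma_\nu}\varphi = \sqrt{\sigma_\varrho}\,u^*\varphi$. Hence $\mathcal{D}(a_0) \subseteq \mathcal{D}(y)$ and $y$ restricted to $\mathcal{D}(a_0)$ coincides with $a_0$; in short, $y \supseteq a_0$. Under hypothesis \eqref{supp5full2}, $a_0$ is essentially self-adjoint, so its closure $\bar{a}_0 = a_0^{**}$ is the unique self-adjoint extension of $a_0$. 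Since $y$ is a self-adjoint extension of $a_0$, we must have $y = \bar{a}_0$. Finally, to see that $\bar{a}_0$ actually does solve the equation, I would note that $\bar{a}_0$ extends $a_0$, and the definition \eqref{a0mapneu} of $a_0$ on $\mathcal{D}(a_0) = \sqrt{\sigma_\nu}\mathcal{H}$ gives $\bar{a}_0\sqrt{\sigma_\nu}\varphi = a_0\sqrt{\sigma_\nu}\varphi = \sqrt{\sigma_\varrho}\,u^*\varphi$ for every $\varphi \in \mathcal{H}$, which is precisely \eqref{supp5full3}.

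The main subtlety, and the place I would be most careful, is the interpretation of the operator equation \eqref{supp5full3}: one must read it as the assertion that $y\sqrt{\sigma_\nu}$ is defined \emph{everywhere} on $\mathcal{H}$ and equals the bounded operator $\sqrt{\sigma_\varrho}\,u^*$. This is what produces the inclusion $y \supseteq a_0$ rather than merely $y \supseteq a_0$ on some possibly smaller common domain. Once this reading is fixed, the argument is routine: the equivalence \eqref{supp5full1}$\Leftrightarrow$\eqref{supp5full2} is a direct translation of intermediate faithfulness via Remark \ref{saka6}, and the uniqueness clause reduces to the standard fact that an essentially self-adjoint symmetric operator admits exactly one self-adjoint extension, namely its closure.
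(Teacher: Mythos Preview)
Your proposal is correct and follows essentially the same approach as the paper: the equivalence \eqref{supp5full1}$\Leftrightarrow$\eqref{supp5full2} is obtained exactly by translating intermediate faithfulness into density of $\mathcal{R}(a_0+{\mathsf 1})$ and invoking Remark~\ref{saka6}, and the uniqueness clause is handled by showing that any self-adjoint solution $y$ of \eqref{supp5full3} must extend $a_0$ and hence coincide with $\bar{a}_0$. Your explicit remark on the reading of \eqref{supp5full3} (that $y\sqrt{\sigma_\nu}$ must be defined on all of $\mathcal{H}$, forcing $\mathcal{D}(y)\supset\mathcal{R}(\sqrt{\sigma_\nu})=\mathcal{D}(a_0)$) is precisely the point the paper makes as well.
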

\begin{proof}
For essential self-adjoint $a_0$ the closure $y=\bar{a}_0$ is  self-adjoint and is obeying
\begin{equation}\label{supp5fullzak}
a_0 \sqrt{\sigma_\nu}=y \sqrt{\sigma_\nu}
\end{equation}
Thus, due to \eqref{essa3neu}, the equation \eqref{supp5full3} possesses a special solution $y=\bar{a}_0$. On the other hand, let $y$ be a linear operator with $y=y^*$ satisfying equation \eqref{supp5full3}. Then, by the structure of the latter, the domain of definition ${\mathcal{D}}(y)$ of $y$ obviously has to obey ${\mathcal{D}}(y)\supset {\mathcal{R}}\bigl(\sqrt{\sigma_\nu}\bigr)={\mathcal{D}}(a_0)$.
In view of \eqref{supp5full3} and \eqref{essa3neu} we infer that \eqref{supp5fullzak} has to be fulfilled, for the $y$ at hand, too. That is, $y$ is self-adjoint and extends $a_0$, $y\supset a_0$. Accordingly, since owing to self-adjointness $y$ is closed, $y\supset\bar{a}_0$ has to be fulfilled. By essential self-adjointness of $a_0$ from this then  $y=\bar{a}_0$ follows. That is, the closure is  the unique solution of \eqref{supp5full3}.
\end{proof}
Conclude with a couple of conditions assuring the essential self-adjointness of $a_0$.
\begin{corolla}\label{boundkato}
Each of the following assumptions implies $a_0$ to be essentially self-adjoint:
\begin{enumerate}
\item\label{boundkato1}
  $a_0$ is bounded;
\item\label{boundkato2}
  $\sigma_\varrho^{-1}+\bigl|\sqrt{\sigma_\nu}\sqrt{\sigma_\varrho}\bigr|^{-1}$ is essentially self-adjoint.
\end{enumerate}
\end{corolla}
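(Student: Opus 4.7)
\textbf{Plan for Corollary \ref{boundkato}.} In both cases I would verify the essential self-adjointness criterion recorded in Remark \ref{saka6}, namely that $\mathcal{R}({\mathsf 1}+a_0)$ is dense in $\mathcal{H}$. Case \eqref{boundkato1} is routine: a densely defined, bounded, symmetric operator extends by continuity to a unique bounded operator on all of $\mathcal{H}$, which is then self-adjoint; this extension coincides with the closure $\bar{a}_0$, so $\mathcal{R}({\mathsf 1}+\bar{a}_0)=\mathcal{H}$ and essential self-adjointness of $a_0$ is immediate.

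The substance lies in case \eqref{boundkato2}. The key algebraic identity comes from the polar decomposition $\sqrt{\sigma_\nu}\sqrt{\sigma_\varrho}=ub$, with $b=|\sqrt{\sigma_\nu}\sqrt{\sigma_\varrho}|$; taking adjoints one has $\sqrt{\sigma_\varrho}\sqrt{\sigma_\nu}=b u^{*}$, so for every $\chi\in\mathcal{H}$,
\begin{equation*}
b\chi=\sqrt{\sigma_\varrho}\sqrt{\sigma_\nu}\,u\chi\in\mathcal{R}(\sqrt{\sigma_\varrho}),\qquad\sqrt{\sigma_\varrho}^{-1}b\chi=\sqrt{\sigma_\nu}\,u\chi.
\end{equation*}
Set $c_0=\sigma_\varrho^{-1}+b^{-1}$ on $\mathcal{D}(c_0):=\mathcal{D}(\sigma_\varrho^{-1})\cap\mathcal{D}(b^{-1})$; both $\sigma_\varrho^{-1}$ and $b^{-1}$ are densely defined positive self-adjoint operators because $\sigma_\nu$ and $\sigma_\varrho$ are of full support. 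For $\psi\in\mathcal{D}(c_0)$, putting $\chi=b^{-1}\psi$, the displayed identity shows $\sqrt{\sigma_\varrho}^{-1}\psi=\sqrt{\sigma_\nu}\,u\chi\in\sqrt{\sigma_\nu}\mathcal{H}=\mathcal{D}(a_0)$, and the definition \eqref{a0mapneu} of $a_0$ together with $\sqrt{\sigma_\varrho}^{-1}\psi=\sqrt{\sigma_\varrho}\,\sigma_\varrho^{-1}\psi$ (valid since $\psi\in\mathcal{D}(\sigma_\varrho^{-1})$) yields
\begin{equation*}
({\mathsf 1}+a_0)\sqrt{\sigma_\varrho}^{-1}\psi=\sqrt{\sigma_\varrho}\,\sigma_\varrho^{-1}\psi+\sqrt{\sigma_\varrho}\,b^{-1}\psi=\sqrt{\sigma_\varrho}\,c_0\psi,
\end{equation*}
so that $\sqrt{\sigma_\varrho}\,\mathcal{R}(c_0)\subseteq\mathcal{R}({\mathsf 1}+a_0)$.

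The remainder is soft. Because $\sigma_\varrho\leq\|\sigma_\varrho\|\,{\mathsf 1}$ and $b\leq\|b\|\,{\mathsf 1}$, one has $\sigma_\varrho^{-1}\geq\|\sigma_\varrho\|^{-1}\,{\mathsf 1}$ and $b^{-1}\geq\|b\|^{-1}\,{\mathsf 1}$, whence $c_0\geq\varepsilon\,{\mathsf 1}$ for a positive constant $\varepsilon$. Under hypothesis \eqref{boundkato2} the closure $\bar{c}_0=c_0^{**}$ is self-adjoint, still bounded below by $\varepsilon$, hence invertible with bounded inverse, so $\mathcal{R}(\bar{c}_0)=\mathcal{H}$. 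A routine closure argument (each $\eta\in\mathcal{R}(\bar{c}_0)$ is the norm limit of some $c_0\psi_n$ with $\psi_n\in\mathcal{D}(c_0)$) then gives density of $\mathcal{R}(c_0)$ in $\mathcal{H}$. Since $\sqrt{\sigma_\varrho}$ is bounded with dense range, it sends dense sets to dense sets, so $\sqrt{\sigma_\varrho}\,\mathcal{R}(c_0)$ is dense, and therefore so is $\mathcal{R}({\mathsf 1}+a_0)$. An appeal to Remark \ref{saka6} completes the proof.

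The main obstacle I expect is the bookkeeping of the unbounded-operator domains when justifying the identity $({\mathsf 1}+a_0)\sqrt{\sigma_\varrho}^{-1}\psi=\sqrt{\sigma_\varrho}\,c_0\psi$; in particular the verification that $\sqrt{\sigma_\varrho}^{-1}\psi$ actually lies in $\mathcal{D}(a_0)=\sqrt{\sigma_\nu}\mathcal{H}$ depends essentially on the polar-decomposition identity \eqref{a0x01}, and the use of $\psi\in\mathcal{D}(\sigma_\varrho^{-1})$ to rewrite $\sqrt{\sigma_\varrho}^{-1}\psi=\sqrt{\sigma_\varrho}\,\sigma_\varrho^{-1}\psi$ must be made carefully rather than treated as a formal cancellation.
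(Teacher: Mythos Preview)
Your argument is correct and follows essentially the same route as the paper: both hinge on the identity $(\mathsf{1}+a_0)\sqrt{\sigma_\varrho}^{-1}\psi=\sqrt{\sigma_\varrho}\,c_0\psi$ for $\psi\in\mathcal{D}(c_0)$ (the paper parametrizes this same identity via $\varphi\in\mathcal{D}_0$ with $\psi=|\sqrt{\sigma_\nu}\sqrt{\sigma_\varrho}|\varphi$, which is just the substitution $\varphi=b^{-1}\psi$), and then conclude density of $\mathcal{R}(\mathsf{1}+a_0)$ from density of $\sqrt{\sigma_\varrho}\,\mathcal{R}(c_0)$. The only noteworthy difference is how density of $\mathcal{R}(c_0)$ is obtained: the paper argues by orthogonality, showing that $\xi\perp\mathcal{R}(c_0)$ forces $\bar{c}_0\xi=0$ and then exploits positivity of the two summands separately to conclude $\xi=0$; your use of the uniform lower bound $c_0\geq\varepsilon\,\mathsf{1}$ (coming from $\sigma_\varrho\leq\|\sigma_\varrho\|\,\mathsf{1}$ and $b\leq\|b\|\,\mathsf{1}$) to get $\mathcal{R}(\bar{c}_0)=\mathcal{H}$ directly is a slightly quicker alternative.
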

\begin{proof}
If $a_0$ is bounded, then the symmetry of $a_0$ implies the closure $\bar{a}_0$ to be a bounded, self-adjoint linear operator on $\mathcal H$.
If \eqref{boundkato2} holds, this relates to the positive linear operator
\begin{equation*}
  c_0=\sigma_\varrho^{-1}+\bigl|\sqrt{\sigma_\nu}\sqrt{\sigma_\varrho}\bigr|^{-1}
\end{equation*}
with ${\mathcal{D}}(c_0)=\sigma_\varrho {\mathcal{H}}\cap \bigl|\sqrt{\sigma_\nu}\sqrt{\sigma_\varrho}\bigr| {\mathcal{H}}$ being dense.
It then follows that ${\mathcal{R}}(c_0)$ is dense. In fact, for $\xi\in {\mathcal{H}}$ one has $\langle c_0\psi,\xi\rangle=0$ for all $\psi\in {\mathcal{D}}(c_0)$ if, and only if, $\xi\in {\mathcal{D}}(c_0^*)$ and
$\langle \psi,c_0^*\xi\rangle=0$ for all $\psi\in {\mathcal{D}}(c_0)$. Hence $c_0^*\xi={\mathsf{0}}$ must hold.
Since $c_0$ is essentially self-adjoint, $c_0^*=\bar{c}_0$, $\xi\in{\mathcal{D}}(\bar{c}_0)$ and $\bar{c}_0\xi={\mathsf{0}}$ hold. Thus, there is a sequence $\{\xi_n\}\subset {\mathcal{D}}(c_0)$ with $\xi_n\rightarrow \xi$ and $c_0 \xi_n  \rightarrow \bar{c}_0\xi={\mathsf{0}}$. From this we infer that $\lim_{n\to\infty}\langle c_0 \xi_n,\xi_n\rangle = 0$.
Now,
$$\langle c_0 \xi_n,\xi_n\rangle= \langle \sigma_\varrho^{-1} \xi_n,\xi_n\rangle+\langle \bigl|\sqrt{\sigma_\nu}\sqrt{\sigma_\varrho}\bigr|^{-1} \xi_n,\xi_n\rangle$$
is fulfilled, with $\langle \sigma_\varrho^{-1} \xi_n,\xi_n\rangle\geq 0$ and $\langle \bigl|\sqrt{\sigma_\nu}\sqrt{\sigma_\varrho}\bigr|^{-1} \xi_n,\xi_n\rangle\geq 0$, by positivity of  the two characteristic constituents of $c_0$. Accordingly, e.g.\,also
$$\lim_{n\to\infty}\langle \bigl|\sqrt{\sigma_\nu}\sqrt{\sigma_\varrho}\bigr|^{-1} \xi_n,\xi_n\rangle = 0$$ has to be fulfilled.  Now, there has to exist a sequence $\{\varphi_n\}$ such that $\xi_n=\bigl|\sqrt{\sigma_\nu}\sqrt{\sigma_\varrho}\bigr| \,\varphi_n$, for all $n\in {\mathbb{N}}$. Thus $\lim_{n\to\infty}\langle \varphi_n,\bigl|\sqrt{\sigma_\nu}\sqrt{\sigma_\varrho}\bigr|\,\varphi_n\rangle = 0$, and which is the same as
$$\lim_{n\to\infty} \bigl|\sqrt{\sigma_\nu}\sqrt{\sigma_\varrho}\bigr|^{\frac{1}{2}}\, \varphi_n={\mathsf{0}}$$ But then also $\lim_{n\to\infty} \bigl|\sqrt{\sigma_\nu}\sqrt{\sigma_\varrho}\bigr|\, \varphi_n=\lim_{n\to\infty} \xi_n={\mathsf{0}}$ is fulfilled. Thus $\xi={\mathsf{0}}$ is seen, and therefore ${\mathsf{0}}$ is the only vector orthogonal to $c_0\psi$, for all $\psi\in {\mathcal{D}}(c_0)$. That is, ${\mathcal{R}}(c_0)$ has to be dense in ${\mathcal{H}}$, and therefore the following density relation holds
\begin{equation}\label{den}
\bigl[\sqrt{\sigma_\varrho} {\mathcal{R}}(c_0)\bigr]={\mathcal{H}}
\end{equation}
Now, let ${\mathcal{D}}_0=\bigl\{\varphi\in {\mathcal{H}}: \bigl|\sqrt{\sigma_\nu}\sqrt{\sigma_\varrho}\bigr| \varphi \in {\mathcal{D}}(c_0)\bigr\}$. For each $\varphi\in {\mathcal{D}}_0$ there exists $\xi(\varphi)\in {\mathcal{H}}$ with $ \bigl|\sqrt{\sigma_\nu}\sqrt{\sigma_\varrho}\bigr| \varphi =\sigma_\varrho \xi(\varphi)$. Thus, since $\sqrt{\sigma_\varrho} \sqrt{\sigma_\nu} =\bigl|\sqrt{\sigma_\nu}\sqrt{\sigma_\varrho}\bigr| u^*$ holds, we infer that
 for each $\varphi\in {\mathcal{D}}_0$ the following is fulfilled
\begin{equation*}
\bigl|\sqrt{\sigma_\nu}\sqrt{\sigma_\varrho}\bigr|\varphi= \sqrt{\sigma_\varrho} \sqrt{\sigma_\nu} u \varphi=\sigma_\varrho \xi(\varphi)
\end{equation*}
Since $\sqrt{\sigma_\varrho}$ as a map is injective, from the latter relation to
$
\sqrt{\sigma_\nu} u \varphi=\sqrt{\sigma_\varrho} \xi(\varphi)
$
can be concluded, for $\varphi\in {\mathcal{D}}_0$. Accordingly, in view of the above relations, we see that
\begin{eqnarray*}                                                                                         \sqrt{\sigma_\varrho}\bigl( c_0 \bigl|\sqrt{\sigma_\nu}\sqrt{\sigma_\varrho}\bigr|\varphi\bigr) &=& \sqrt{\sigma_\varrho}\bigl(\sigma_\varrho^{-1} \bigl|\sqrt{\sigma_\nu}\sqrt{\sigma_\varrho}\bigr|\varphi+ \bigl|\sqrt{\sigma_\nu}\sqrt{\sigma_\varrho}\bigr|^{-1} \bigl|\sqrt{\sigma_\nu}\sqrt{\sigma_\varrho}\bigr|\varphi\bigr) \\
&=& \sqrt{\sigma_\varrho} (\xi(\varphi)+ \varphi)= \sqrt{\sigma_\varrho} \xi(\varphi) + \sqrt{\sigma_\varrho} \varphi=\sqrt{\sigma_\nu} u \varphi + \sqrt{\sigma_\varrho} \varphi\\
& =& (\sqrt{\sigma_\nu}+ \sqrt{\sigma_\varrho} u^*) u\varphi = ({\mathsf{1}}+ a_0)\sqrt{\sigma_\nu}u\varphi
\end{eqnarray*}
where in the last step \eqref{essa3neu} has been made use of. Thus, in view of \eqref{den} the subset $\sqrt{\sigma_\nu}u {\mathcal{D}}_0\subset {\mathcal{D}}(a_0)$ by ${\mathsf{1}}+ a_0$ is mapped onto a dense subset of ${\mathcal{H}}$. But then even more $[{\mathcal{R}}(({\mathsf{1}}+ a_0))]={\mathcal{H}}$ has to hold. Thus, by our standard argumentation for densely defined, positive linear operators, $a_0$ is essentially self-adjoint.
\end{proof}
\subsubsection{Special cases where the hypothesis holds}\label{beimatrix5}
As noted in Corollary \ref{boundkato}\,\eqref{boundkato1}, $a_0$ is essentially self-adjoint in the bounded case, and thus then by Lemma \ref{supp5full} in respect of $\nu$, $\varrho$  intermediate faithfulness will occur. Boundedness of  $a_0$ can be characterized in terms of domination $\ll$, see Remark \ref{saka}, \eqref{saka0}-\eqref{saka2}.
\begin{lemma}\label{supp5bound}
For  density operators $\sigma_\nu, \sigma_\varrho$ of full rank, the following are equivalent:
\begin{enumerate}
  \item\label{supp5bound1} $a_0$ is bounded on ${\mathcal{D}}(a_0)$;
  \item\label{supp5bound2} $u\sigma_\varrho u^*\ll \sigma_\nu$, that is, $u\sigma_\varrho u^*\leq \gamma \sigma_\nu$, for some  real $\gamma >0$;
  \item\label{supp5bound3} $\bigl|\sqrt{\sigma_\varrho}\sqrt{\sigma_\nu}\bigr|\ll\sigma_\nu$, that is, $\bigl|\sqrt{\sigma_\varrho}\sqrt{\sigma_\nu}\bigr|\leq \gamma \sigma_\nu$, for some  real $\gamma >0$.
\end{enumerate}
Thereby, in \eqref{supp5bound2} and \eqref{supp5bound3}, there is a minimal bound $\gamma$ with this property fulfilled.
\end{lemma}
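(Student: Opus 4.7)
The plan is to reduce everything to quadratic-form inequalities in a single variable $\varphi\in{\mathcal{H}}$ by parametrising ${\mathcal{D}}(a_0)$ via the bijection $\varphi\mapsto \sqrt{\sigma_\nu}\varphi$. Since $\sigma_\nu,\sigma_\varrho$ are of full rank, $\sqrt{\sigma_\nu}\sqrt{\sigma_\varrho}$ has trivial kernel and dense range, so the partial isometry $u$ of the polar decomposition $\sqrt{\sigma_\nu}\sqrt{\sigma_\varrho}=u|\sqrt{\sigma_\nu}\sqrt{\sigma_\varrho}\,|$ is in fact unitary; in particular $u\sigma_\varrho u^*$ in \eqref{supp5bound2} is well-defined as a bounded positive operator.

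For the equivalence \eqref{supp5bound1}$\Leftrightarrow$\eqref{supp5bound2}, I use the explicit form \eqref{a0mapneu}. For $\eta=\sqrt{\sigma_\nu}\varphi\in{\mathcal{D}}(a_0)$ one has
\[
\|a_0\eta\|^2=\|\sqrt{\sigma_\varrho}\,u^*\varphi\|^2=\langle u\sigma_\varrho u^*\varphi,\varphi\rangle,\qquad \|\eta\|^2=\langle\sigma_\nu\varphi,\varphi\rangle,
\]
and since $\varphi\in{\mathcal{H}}$ is arbitrary, $\|a_0\eta\|\leq\sqrt{\gamma}\,\|\eta\|$ for all $\eta\in{\mathcal{D}}(a_0)$ holds iff the operator inequality $u\sigma_\varrho u^*\leq\gamma\,\sigma_\nu$ holds. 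This is \eqref{supp5bound1}$\Leftrightarrow$\eqref{supp5bound2}, with the minimal constant being $\gamma=\|\bar{a}_0\|^2$.

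For \eqref{supp5bound1}$\Leftrightarrow$\eqref{supp5bound3} I compute the quadratic form of $a_0$. Using \eqref{a0x01} (which gives $u\sqrt{\sigma_\varrho}\sqrt{\sigma_\nu}=|\sqrt{\sigma_\varrho}\sqrt{\sigma_\nu}\,|$), the argument from the paragraph preceding Lemma \ref{supp5full} yields
\[
\langle a_0\sqrt{\sigma_\nu}\varphi,\sqrt{\sigma_\nu}\varphi\rangle
=\langle\sqrt{\sigma_\nu}\sqrt{\sigma_\varrho}\,u^*\varphi,\varphi\rangle
=\langle|\sqrt{\sigma_\varrho}\sqrt{\sigma_\nu}\,|\varphi,\varphi\rangle.
\]
Hence the form inequality $\langle a_0\eta,\eta\rangle\leq\gamma\|\eta\|^2$ for all $\eta\in{\mathcal{D}}(a_0)$ is equivalent to \eqref{supp5bound3}. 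The remaining, and main, step is the classical fact that for a positive symmetric operator $a_0$ with dense domain, form boundedness is equivalent to operator boundedness: the direction that needs an argument is that $\langle a_0\eta,\eta\rangle\leq\gamma\|\eta\|^2$ implies $\|a_0\eta\|\leq\gamma\|\eta\|$. This follows by applying the Cauchy--Schwarz inequality to the positive sesquilinear form $(\eta,\xi)\mapsto\langle a_0\eta,\xi\rangle$ on ${\mathcal{D}}(a_0)\times{\mathcal{D}}(a_0)$, obtaining
\[
|\langle a_0\eta,\xi\rangle|^2\leq\langle a_0\eta,\eta\rangle\langle a_0\xi,\xi\rangle\leq\gamma^2\|\eta\|^2\|\xi\|^2,
\]
and then taking the supremum over $\xi$ in the dense subspace ${\mathcal{D}}(a_0)$.

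Combining the two equivalences already gives \eqref{supp5bound2}$\Leftrightarrow$\eqref{supp5bound3}. For the existence of the minimal bound in each of \eqref{supp5bound2} and \eqref{supp5bound3}, observe that in both cases the set of admissible $\gamma$ is a closed upper half-line of $\mathbb{R}_+$ (closedness of the operator ordering in the weak operator topology), so the infimum is attained; explicitly the minimum is $\|\bar{a}_0\|^2$ in \eqref{supp5bound2} and $\|\bar{a}_0\|$ in \eqref{supp5bound3}. The main technical point is thus the form-vs-operator boundedness argument for a positive symmetric (a priori not self-adjoint) $a_0$; the rest is direct algebraic manipulation using the polar decomposition of $\sqrt{\sigma_\nu}\sqrt{\sigma_\varrho}$.
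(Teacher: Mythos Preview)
Your proof is correct and follows essentially the same route as the paper's: both establish \eqref{supp5bound1}$\Leftrightarrow$\eqref{supp5bound2} by rewriting $\|a_0\sqrt{\sigma_\nu}\varphi\|^2=\langle u\sigma_\varrho u^*\varphi,\varphi\rangle$, and \eqref{supp5bound1}$\Leftrightarrow$\eqref{supp5bound3} by computing the quadratic form $\langle a_0\sqrt{\sigma_\nu}\varphi,\sqrt{\sigma_\nu}\varphi\rangle=\langle|\sqrt{\sigma_\varrho}\sqrt{\sigma_\nu}|\varphi,\varphi\rangle$ and invoking the equivalence of form and norm boundedness for positive symmetric operators. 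The only difference is that you spell out the Cauchy--Schwarz argument for the latter equivalence, which the paper simply takes as known; your identification of the minimal constants ($\|\bar a_0\|^2$ and $\|\bar a_0\|$ respectively) agrees with the paper's.
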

\begin{proof}
Owing to ${\mathcal{D}}(a_0)=\sqrt{\sigma_\nu} {\mathcal{H}}$ and \eqref{essa3neu}, for each $\eta\in {\mathcal{H}}$ we have
\begin{equation*}
  \|a_0 \sqrt{\sigma_\nu}\eta \|^2=\|\sqrt{\sigma_\varrho}\,u^*\eta\|^2=\langle u\sigma_\varrho u^* \eta,\eta\rangle
\end{equation*}
Hence, since $\|\sqrt{\sigma_\nu}\eta \|^2=\langle \sigma_\nu \eta,\eta \rangle$ holds, boundedness of $a_0$, that is, the condition
$$ \|a_0 \sqrt{\sigma_\nu}\eta \|\leq \alpha \|\sqrt{\sigma_\nu}\eta \|, \,\forall\,\eta\in {\mathcal{H}},$$
with real $\alpha > 0$, is equivalent to
$
  \langle u\sigma_\varrho u^* \eta,\eta\rangle \leq \alpha^2  \langle \sigma_\nu \eta,\eta \rangle
$
for all $\eta\in {\mathcal{H}}$. The latter is equivalent to $u\sigma_\varrho u^*\leq \gamma \sigma_\nu$, with $\gamma=\alpha^2\geq \|a_0\|^2$.
Hence, $\gamma= \|a_0\|^2$ is the minimal possible value. This is \eqref{supp5bound1} $\Leftrightarrow$  \eqref{supp5bound2}.

To see \eqref{supp5bound1} $\Leftrightarrow$  \eqref{supp5bound3} remind that, for a positive symmetric operator $a_0$, boundedness equivalently is stated through the condition asserting existence of $\alpha>0$ such that
$$ \bigl\langle a_0 \sqrt{\sigma_\nu}\eta,\sqrt{\sigma_\nu}\eta\bigr\rangle \leq \alpha \bigl\|\sqrt{\sigma_\nu}\eta \bigr\|^2$$
gets satisfied, for all $\eta\in {\mathcal{H}}$. This is the same as asserting that for any $\eta\in {\mathcal{H}}$
\begin{equation*}
\begin{split}
   \bigl\langle\bigl|\sqrt{\sigma_\varrho}\sqrt{\sigma_\nu}\bigr|\, \eta,\eta\bigr\rangle & =
\bigl\langle u\bigl|\sqrt{\sigma_\nu}\sqrt{\sigma_\varrho}\bigr| u^* \eta,\eta\bigr\rangle=
\bigl\langle \sqrt{\sigma_\nu}\sqrt{\sigma_\varrho} u^* \eta,\eta\bigr\rangle=\bigl\langle \sqrt{\sigma_\varrho} u^* \eta,\sqrt{\sigma_\nu}\eta\bigr\rangle\\
     &=\bigl\langle a_0 \sqrt{\sigma_\nu}\eta,\sqrt{\sigma_\nu}\eta\bigr\rangle\leq \alpha \bigl\langle  \sqrt{\sigma_\nu}\eta,\sqrt{\sigma_\nu}\eta\bigr\rangle=\alpha \bigl\langle  \sigma_\nu \eta,\eta\bigr\rangle
\end{split}
\end{equation*}
has to be fulfilled. Hence, \eqref{supp5bound1} is equivalent to $\bigl|\sqrt{\sigma_\varrho}\sqrt{\sigma_\nu}\bigr|\leq \alpha \sigma_\nu$, for some $\alpha>0$. Clearly, if $a_0$ is bounded, then $\alpha$ cannot be smaller than $\|a_0\|$, but $\alpha=\|a_0\|$ is possible. Thus, any $\gamma\geq \|a_0\|$ is possible in \eqref{supp5bound3}.
\end{proof}
Apply Lemma \ref{supp5bound} to the well-known context of domination of states.
\begin{corolla}\label{domi}
Suppose $\varrho\ll \nu$ or $\nu\ll \varrho$, for $\varrho,\nu\in {\mathcal S}_0^{\mathsf{faithful}}({\mathsf B}({\mathcal H}))$. Then, the hypothesis of the intermediate faithfulness is fulfilled.
\end{corolla}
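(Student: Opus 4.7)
The plan is to reduce Corollary \ref{domi} to the boundedness criterion for $a_0$ given in Lemma \ref{supp5bound}, which together with Corollary \ref{boundkato}\,\eqref{boundkato1} and Lemma \ref{supp5full} will force the hypothesis of intermediate faithfulness to hold.

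First I would translate the domination hypothesis to the level of density operators. Since $\varrho(x^*x)={{\mathsf{tr}}}\,\sigma_\varrho x^*x$ and $\nu(x^*x)={{\mathsf{tr}}}\,\sigma_\nu x^*x$ for all $x\in{\mathsf B}({\mathcal H})$, the relation $\varrho\ll\nu$ as recalled in Remark \ref{saka}\,\eqref{saka1} is equivalent to the operator inequality
\[
\sigma_\varrho\leq \gamma_0\,\sigma_\nu
\]
for some real $\gamma_0>0$ (and analogously $\nu\ll\varrho$ amounts to $\sigma_\nu\leq \gamma_0\,\sigma_\varrho$).

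Next, assume $\varrho\ll\nu$. The key step is operator monotonicity of the square root. From $\sigma_\varrho\leq \gamma_0\,\sigma_\nu$ I would deduce
\[
\bigl|\sqrt{\sigma_\varrho}\sqrt{\sigma_\nu}\bigr|^{2}=\sqrt{\sigma_\nu}\,\sigma_\varrho\sqrt{\sigma_\nu}\leq \gamma_0\,\sqrt{\sigma_\nu}\,\sigma_\nu\sqrt{\sigma_\nu}=\gamma_0\,\sigma_\nu^{\,2},
\]
and then, by monotonicity of $t\mapsto\sqrt{t}$ on positive operators,
\[
\bigl|\sqrt{\sigma_\varrho}\sqrt{\sigma_\nu}\bigr|\leq \sqrt{\gamma_0}\,\sigma_\nu.
\]
This is precisely the criterion of Lemma \ref{supp5bound}\,\eqref{supp5bound3}, so $a_0$ defined by \eqref{a0def} is bounded on ${\mathcal D}(a_0)$. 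By Corollary \ref{boundkato}\,\eqref{boundkato1} the (symmetric, positive) operator $a_0$ is then essentially self-adjoint, and Lemma \ref{supp5full}\,\eqref{supp5full1}$\Leftrightarrow$\eqref{supp5full2} yields the intermediate faithfulness of $\omega$ in respect of $\nu,\varrho$.

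Finally, the case $\nu\ll\varrho$ is handled by the symmetry discussed in Remark \ref{invarc}: the positive normal form $\omega$ determined by \eqref{fquest} is unchanged if the r\^oles of $\nu$ and $\varrho$ are interchanged (with $u$ replaced by $u^{*}$ via \eqref{polzrev}). Running the same argument with the r\^oles swapped, from $\sigma_\nu\leq\gamma_0\,\sigma_\varrho$ one obtains $\bigl|\sqrt{\sigma_\nu}\sqrt{\sigma_\varrho}\bigr|\leq\sqrt{\gamma_0}\,\sigma_\varrho$, the analogous operator $a_0'$ is bounded and essentially self-adjoint, and Lemma \ref{supp5full} again gives intermediate faithfulness. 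There is really no serious obstacle here; the only point requiring care is the invocation of operator monotonicity of the square root (the square itself is \emph{not} monotone in general) and the symmetric reformulation needed to cover the second case.
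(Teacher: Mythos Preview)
Your proof is correct and follows essentially the same route as the paper's: translate $\varrho\ll\nu$ to $\sigma_\varrho\leq\gamma_0\sigma_\nu$, sandwich by $\sqrt{\sigma_\nu}$ and apply operator monotonicity of the square root to reach condition \eqref{supp5bound3} of Lemma \ref{supp5bound}, conclude boundedness and hence essential self-adjointness of $a_0$, and then invoke Lemma \ref{supp5full}; the case $\nu\ll\varrho$ is handled by the symmetry of Remark \ref{invarc} in both arguments.
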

\begin{proof}
The assumption $\varrho\ll \nu$ is equivalent to  $\sigma_\varrho\leq \lambda \,\sigma_\nu$, for real $\lambda>0$. Hence,
\begin{equation*}
\left|\sqrt{\sigma_\varrho} \sqrt{\sigma_\nu} \right|^2= \sqrt{\sigma_\nu} \sigma_\varrho \sqrt{\sigma_\nu}\leq  \lambda \,\sigma_\nu^2
\end{equation*}
By operator monotony \cite{Dono:74} of the square root function $|\sqrt{\sigma_\varrho} \sqrt{\sigma_\nu} |\leq \sqrt{\lambda} \,\sigma_\nu$ is implied. Hence condition \eqref{supp5bound3} of Lemma \ref{supp5bound} is fulfilled. By  Lemma \ref{supp5bound} this guarantees for boundedness of the positive symmetric $a_0$. Hence, $\bar{a}_0$ is self-adjoint. By Lemma \ref{supp5full} then  intermediate faithfulness holds. In case of  $\nu\ll \varrho$ the same conclusions follow with the state arguments seen in reversed order. By Remark \ref{invarc} however, intermediate faithfulness in respect of $\nu$ and $\varrho$ is the same as that in respect of $\varrho$ and $\nu$.
\end{proof}
\begin{example}\label{domicount}
For a faithful normal state $\varrho$, let $\nu$  be any faithful normal state constructed in accordance with Example \ref{Bop20}\,\eqref{Bop20a}. Then, whereas in respect of $\nu$ and $\varrho$ the premises of Example \ref{Bop1} are not fulfillable, nevertheless the hypothesis of the intermediate faithfulness holds. Moreover, for ${\mathsf{dim}}\,{\mathcal{H}}=\infty$, depending from the choice of the parameter $\lambda$ in Example \ref{Bop20}\,\eqref{Bop20a}, the operator $a_0$ may be bounded or unbounded.
\end{example}
\begin{proof}
Relating the construction schema in Example \ref{Bop20}\,\eqref{Bop20a}, if $\nu$ corresponds to an admitted parameter value with $\lambda>0$, then $\sigma_\varrho\leq \alpha\, \sigma_\nu$ follows, for some $\alpha>0$. Hence, $\varrho\ll \nu$ is fulfilled. On the other hand, if the parameter is chosen with $\lambda<0$, then $\nu\ll \varrho$ follows.
In view of Corollary \ref{domi} now the result follows.
\end{proof}
\begin{example}\label{inklus}
Let $\nu$ and $\varrho$ be faithful normal states such that one of the relations $${\mathcal{R}}(\sigma_\nu)\subset {\mathcal{R}}(\sigma_\varrho), {\mathcal{R}}(\sigma_\varrho)\subset {\mathcal{R}}(\sigma_\nu), {\mathcal{R}}(\sqrt{\sigma_\nu})\subset {\mathcal{R}}(\sqrt{\sigma_\varrho}), {\mathcal{R}}(\sqrt{\sigma_\varrho})\subset {\mathcal{R}}(\sqrt{\sigma_\nu})$$
is fulfilled.
Then the hypothesis of the intermediate faithfulness holds.
\end{example}
\begin{proof}
Suppose $\sigma_\nu{\mathcal{H}}\subset \sigma_\varrho{\mathcal{H}}$. Then  $\sigma_\varrho^{-1}\sigma_\nu$ is defined on all of ${\mathcal{H}}$, and owing to
$$\sigma_\varrho^{-1}\sigma_\nu=\bigl(\sigma_\nu \sigma_\varrho^{-1}\bigr)^*$$ has to be closed.
Hence, by the closed graph theorem $y=\sigma_\varrho^{-1}\sigma_\nu$ is bounded on ${\mathcal{H}}$. That is,
$\sigma_\varrho y=\sigma_\nu$ holds, with $y\in {\mathsf{B}}({\mathcal{H}})\backslash \{0\}$. From this the conclusion is
\begin{equation}\label{mon}
\sigma_\nu^2 =\sigma_\varrho yy^*\sigma_\varrho \leq \|yy^*\| \sigma_\varrho^2=\|y\|^2 \sigma_\varrho^2
\end{equation}
Hence, since the square root function is operator monotonous, $\sigma_\nu\leq \|y\| \sigma_\varrho$ follows. In case of ${\mathcal{R}}(\sqrt{\sigma_\nu})\subset {\mathcal{R}}(\sqrt{\sigma_\varrho})$, the conclusions leading to \eqref{mon} remain true with the density operators replaced with their  respective square roots. That is, instead of \eqref{mon} we now have $\sigma_\nu\leq \|y\|^2 \sigma_\varrho$.
Thus, in both cases $\nu \ll \varrho$ is seen to hold.
The remaining cases follow from the previous by interchanging the r\^{o}le of the state arguments, that is $\varrho\ll \nu$  then holds. In either case, in view of Corollary \ref{domi} the assertion follows.
\end{proof}
\begin{remark}\label{Bop1c}
\begin{enumerate}
\item\label{Bop1ca}
Example \ref{arcsupp}\,\eqref{arcsupp2} and \eqref{arcsupp3} are special cases of Example \ref{Bop1}.
\item\label{Bop1cc}
The results of this subsection
remain true for all those normal positive linear forms $\omega$ which are implemented if instead of $x$ in  \eqref{fquest2} $$x=\alpha \sqrt{\sigma_\varrho} u^*+ \beta \sqrt{\sigma_\nu}$$ is used there,
with arbitrary reals with $\alpha\geq 0$, $\beta>0$ (the expression $x$ occurring in \eqref{fquest1} corresponds to the special choice $\alpha=\beta=1$).
\end{enumerate}
\end{remark}
\newpage
\section{The Bures distance along a curve}\label{kurv}
\subsection{The local dilation function in Bures geometry}\label{kurv1}
In the following, the infinitesimal behavior of the Bures distance along the states on a (continuous) curve $\gamma$ in the vicinity of one of the states $\nu$ it passes through will be analyzed.
A curve (oriented curve) $\gamma$ in the state space ${\mathcal S}(M)$ and passing through $\nu$ will be supposed to be given by
an injective continuous map $\Phi_\gamma\,: I\ni t\longmapsto \nu_t\in {\mathcal S}(M)$ acting from some non-trivial interval $I\subset {\mathbb R}$ (equipped with the usual
distance function given by $|x-y|$) into the metric space
$\{{\mathcal S}(M),d_1\}$ such that $\nu=\nu_{t_0}$, for some inner point $t_0\in I$. The curve $\gamma$ passing through $\nu$ then will be understood as an equivalence class modulo homeomorphisms
(orientation preserving homeomorphisms) of such kind of parameterization.  The supporting set $|\gamma|=
\Phi_\gamma(I)$ is a metric subspace of $\{{\mathcal S}(M),d_1\}$ which is
homeomorphic with the interval
$I$, and therefore as a metric structure can be identified with the curve
$\gamma$ (which often tacitly will be done).  The curve $\gamma$ is called oriented, if also the orientation of the given parameterization (which
is determined by the foregiven $I$ and $\Phi_\gamma$) is taken
care of. In
order to simplify in the notations, $\gamma$ and
one of its foregiven (homeomorphic) parameterizations $\Phi_\gamma$ and homeomorphy
interval $I$ will be clashed together into a map-like notation
$\gamma:I\ni t\,\longmapsto\,\nu_t\in {\mathcal S}(M)$, with
$\gamma(t)=\nu_t$, and if
we do not even care about any details,
the abbreviating notation $\gamma\subset{\mathcal S}(M)$ will be
in use. If $\nu$ is a state of special interest the curve is passing through, then without loss of
generality for local considerations in the vicinity of $\nu$ a defining parameterization with $\gamma(0)=\nu$ and compact  parameter interval $I$ of fixed but arbitrary small length and with inner point $0$ will be considered, and $\gamma$ for short then will be referred to as `parameterized curve passing through $\nu$'. For these and subsequent details on general settings from curve theory on metric spaces, see
e.g.~\cite[\S\S 11,12,\,especially \S 12,\,5.-7.]{Rino:61}.
\subsubsection{Local dilation function of a curve}\label{dil0}
According to curve theory, and since $d_B$ is topologically equivalent with
$d_1$ on the state space, each curve $\gamma$ also is
$d_B$-continuous and, since $I$ is compact, as a metric space will be complete. Hence, the local dilation function $I\ni t\,\longmapsto\,
{\mathop{\mathrm{dil}}}_t^B \gamma\in {\overline{{\mathbb{R}}}}_+$ of a
curve $\gamma:I\ni t\,\longmapsto\,\gamma(t)\in {\mathcal S}(M)$ with respect to the Bures distance
reads as
\begin{equation}\label{3b.1}
{\mathop{\mathrm{dil}}}_t^B \gamma=\limsup_{|s-u| \downarrow 0}
\biggl\{ \frac{d_B(M|\nu_s,\nu_u)}{|s-u|}:
u\leq t\leq s, u\not = s,\, u,s\in I\biggr\}.
\end{equation}
Also, since for each inner point $t\in I$, and
$\alpha,\beta>0$ with $u=t-\alpha,s=t+\beta\in I$, by
the triangle inequality $d_B(M|\nu_s,\nu_u)\leq d_B(M|\nu_s,\nu_t)+d_B(M|\nu_t,\nu_u)$ has to be fulfilled, dividing this inequality by $|s-u|=\alpha+\beta$ will yield that
$$
\frac{d_B(M|\nu_s,\nu_u)}{|s-u|}\leq \lambda\, \frac{d_B(M|\nu_s,\nu_t)}{|s-t|}+(1-\lambda)\, \frac{d_B(M|\nu_t,\nu_u)}{|t-u|}
$$
with $\lambda=\frac{|s-t|}{|s-u|}=\frac{\beta}{(\alpha+\beta)}\in \bigl(0,1\bigr)$. Hence, for $u<t<s$
\[
\frac{d_B(M|\nu_s,\nu_u)}{|s-u|} \leq \max\left\{\frac{d_B(M|\nu_s,\nu_t)}{|s-t|}, \frac{d_B(M|\nu_t,\nu_u)}{|t-u|}\right\}
\]
From this in view of
\eqref{3b.1}
\begin{equation}\label{3b.2}
{\mathop{\mathrm{dil}}}_t^B \gamma=\limsup_{\delta\to 0}
\frac{d_B(M|\nu_{t+\delta},\nu_t)}{|\delta|}\,,
\end{equation}
with the admissible increments $\delta\in {{\mathbb{R}}}\backslash \{0\}$
obeying $t+\delta\in I$. Note that since the functions $I\ni t\,\longmapsto\,
|\delta|^{-1}d_B(M|\nu_{t+\delta},\nu_t)$ are continuous at each $t\in I$,
for all $\delta\not=0$ with $t+\delta\in I$, according to \eqref{3b.2} the
local dilation function at each point is a $\limsup$ of continuous functions.
Hence, it is a Lebesgue-measurable function.

As for the meaning of this function, suppose $I\ni t\,\mapsto\,\gamma(t)\in {\mathcal S}(M)$, with compact interval $I$, defines a curve $\gamma$. Then, the
following is the most important special case of a result which
usually is referred to as Rinow's theorem:
\begin{lemma}\label{rinow0}
Assume \eqref{3b.2} as a point function is bounded, i.e.~
$\max_{t\in I} {\mathop{\mathrm{dil}}}_t^B \gamma<\infty$.
Then, $\gamma$ is Bures-rectifiable with Bures-length $\varTheta[\gamma]$ given by
\begin{equation}\label{rinow}
\varTheta[\gamma]=\int_I {\mathop{\mathrm{dil}}}_t^B \gamma\ d\/t\,.
\end{equation}
\end{lemma}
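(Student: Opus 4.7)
\medskip

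\noindent\textbf{Proof proposal.}
The plan is to establish Lemma \ref{rinow0} by specialising the classical rectifiability theorem from metric curve theory (as given in \cite{Rino:61}, \S\S 11--12) to the metric space $\{{\mathcal S}(M),d_B\}$. Since ${\mathsf{dil}}^B_\cdot\gamma$ is already known to be Lebesgue-measurable and, by hypothesis, bounded on the compact interval $I$ by some constant $L<\infty$, the whole content of the lemma is (i)~that $\gamma$ has finite Bures-length, and (ii)~that this length equals the Lebesgue integral of the dilation. I would organise these in two steps.

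First, I would prove a uniform Lipschitz estimate of the form $d_B(M|\nu_s,\nu_u)\le L|s-u|$ for all $s,u\in I$. Fix $t\in I$ and choose, by definition \eqref{3b.2}, some $\delta(t)>0$ such that $d_B(M|\nu_{t+\delta},\nu_t)\le (L+\varepsilon)|\delta|$ whenever $|\delta|<\delta(t)$ with $t+\delta\in I$. The open intervals $(t-\delta(t),t+\delta(t))_{t\in I}$ cover $I$, so compactness yields a Lebesgue number $\eta>0$. For arbitrary $s>u$ in $I$, partition $[u,s]$ into finitely many subintervals of length $<\eta$, bound $d_B$ on each piece by $(L+\varepsilon)$ times its length (using the triangle inequality), and let $\varepsilon\downarrow 0$. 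This gives both the global Lipschitz bound and the finiteness of the Bures-length $\varTheta[\gamma]\le L\cdot|I|$, i.e.\ rectifiability.

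Second, I would prove the integral formula by analysing the arc-length function $\sigma(t)=\varTheta[\gamma|_{[0,t]}]$ (choosing some base point $0\in I$). Standard facts from curve theory on metric spaces (see \cite[\S 12, 5.--7.]{Rino:61}) give that $\sigma$ is non-decreasing, continuous, and, by the Lipschitz bound above, in fact $L$-Lipschitz on $I$. Hence $\sigma$ is absolutely continuous and differentiable almost everywhere. For any $t\in I$ and admissible $\delta$ one has $d_B(M|\nu_{t+\delta},\nu_t)\le|\sigma(t+\delta)-\sigma(t)|$ by the very definition of length, whereas subdividing $[t,t+\delta]$ and again using the defining $\limsup$ of the dilation produces the reverse estimate in the limit $\delta\to 0$. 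Taking $\limsup$ on both sides of these comparisons yields the identity $\sigma^{\,\prime}(t)={\mathsf{dil}}_t^B\gamma$ at each Lebesgue point of both sides, hence almost everywhere. The fundamental theorem of calculus for absolutely continuous functions then gives $\varTheta[\gamma]=\sigma(\sup I)-\sigma(\inf I)=\int_I{\mathsf{dil}}_t^B\gamma\,dt$, which is \eqref{rinow}.

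The main obstacle is the second step: reconciling the two-sided, open-interval $\limsup$ in \eqref{3b.1}/\eqref{3b.2} with the one-sided difference quotients of the arc-length function $\sigma$. The reduction in \eqref{3b.2} from \eqref{3b.1} to a one-point increment, which the excerpt already supplies via the triangle-inequality argument, is what makes this go through; one has to verify carefully that the subadditive sum estimate used there transfers to the arc-length increments and produces the matching lower bound for $\sigma^{\,\prime}$. Once that matching is in place, the remainder is routine measure-theoretic bookkeeping, and the full statement reduces to Rinow's theorem as referenced.
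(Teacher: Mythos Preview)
The paper does not prove Lemma~\ref{rinow0} at all: it introduces the statement as ``the most important special case of a result which usually is referred to as Rinow's theorem'' and simply refers to \cite[\S\S~11--12]{Rino:61}. Your proposal therefore goes well beyond what the paper does, by actually sketching a proof of this classical fact; the sketch follows the standard route for Rinow's theorem (Lipschitz bound via compactness, absolute continuity of the arc-length function, identification of $\sigma'$ with the local dilation a.e., and the fundamental theorem of calculus).

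One comment on the sketch itself: the inequality ${\mathsf{dil}}^B_t\gamma\le\sigma'(t)$ a.e.\ is immediate from $d_B(\nu_{t+\delta},\nu_t)\le|\sigma(t+\delta)-\sigma(t)|$, as you say. The reverse inequality is where the real work lies, and your description of it (``subdividing $[t,t+\delta]$ and again using the defining $\limsup$'') is too loose as stated: a $\limsup$ at each point does not directly give a uniform bound on increments over small subintervals. The clean way through is to first prove the integral inequality $\varTheta[\gamma|_{[a,b]}]\le\int_a^b{\mathsf{dil}}^B_t\gamma\,dt$ via a covering argument (for each $\varepsilon>0$ and each $t$, the set of $\delta$ with $d_B(\nu_{t+\delta},\nu_t)\le({\mathsf{dil}}^B_t\gamma+\varepsilon)|\delta|$ contains a neighbourhood of $0$; a Vitali or Cousin covering of $[a,b]$ then bounds every partition sum by a Riemann sum of ${\mathsf{dil}}^B_\cdot\gamma+\varepsilon$), and then deduce $\sigma'(t)\le{\mathsf{dil}}^B_t\gamma$ a.e.\ by Lebesgue differentiation. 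You correctly flag this step as the main obstacle, so the proposal is honest about where the content is; it just needs the covering argument spelled out rather than the informal subdivision.
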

What will be done is to study systematically the structure of the local dilation function at a state $\nu$ under certain plausible assumptions about the way a curve $\gamma$ might be passing through this state, with $\gamma(0)=\nu$, that is, the expression
\begin{equation}\label{3b.3}
{\mathop{\mathrm{dil}}}_0^B \gamma={{\mathop{\mathrm{dil}}}_t^B \gamma}\big|_{t=0}=\limsup_{t\to 0}
\frac{d_B(M|\nu_{t},\nu)}{|t|}\,
\end{equation}
will be under inspection. Start with a fact on  the similarity of parameterized curves.
\begin{corolla}\label{nuequiv}
Let $\gamma:I\ni t\,\longmapsto\,\nu_t\in {\mathcal S}(M)$ and $\tilde{\gamma}:\tilde{I}\ni t\,\longmapsto\,\tilde{\nu}_t\in {\mathcal S}(M)$ be two parameterized curves $\gamma$ and $\tilde{\gamma}$ which both are passing through $\nu$ at $t=0$. Suppose
\begin{equation}\label{nuequiv0}
 \limsup_{t\to 0}
\frac{d_B(M|\nu_{t},\tilde{\nu}_{t})}{|t|}\leq \varepsilon<\infty
\end{equation}
Then, either ${\mathop{\mathrm{dil}}}_0^B \gamma={\mathop{\mathrm{dil}}}_0^B \tilde{\gamma}=\infty$ or both are finite and obey
$
|{\mathop{\mathrm{dil}}}_0^B \gamma-{\mathop{\mathrm{dil}}}_0^B \tilde{\gamma}|\leq \varepsilon
$.
\end{corolla}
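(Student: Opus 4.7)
The plan is to reduce the claim to the triangle inequality for $d_B$ together with the elementary subadditivity of $\limsup$. First I would fix $t$ near $0$ (with $t\ne 0$) inside both parameter intervals $I$ and $\tilde I$ and write, via Definition~\ref{budi} and the triangle inequality already established for $d_B$ on $\mathcal{S}(M)$,
\[
d_B(M|\nu_t,\nu)\le d_B(M|\nu_t,\tilde\nu_t)+d_B(M|\tilde\nu_t,\nu),
\]
and symmetrically with the roles of $\gamma$ and $\tilde\gamma$ interchanged. Dividing each of these inequalities by $|t|$ puts all three relevant difference quotients into a single estimate.

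Next I would pass to the $\limsup$ as $t\to 0$. Using the standard fact that $\limsup_{t\to 0}(a(t)+b(t))\le \limsup_{t\to 0}a(t)+\limsup_{t\to 0}b(t)$ whenever at least one of the two $\limsup$'s is finite, together with the hypothesis \eqref{nuequiv0} and the definition \eqref{3b.3} of ${\mathop{\mathrm{dil}}}_0^B\gamma$ and ${\mathop{\mathrm{dil}}}_0^B\tilde\gamma$, this yields the two bounds
\[
{\mathop{\mathrm{dil}}}_0^B\gamma\le \varepsilon+{\mathop{\mathrm{dil}}}_0^B\tilde\gamma,\qquad {\mathop{\mathrm{dil}}}_0^B\tilde\gamma\le \varepsilon+{\mathop{\mathrm{dil}}}_0^B\gamma.
\]
From here the alternative stated in the corollary follows by inspection: if either dilation is $+\infty$, the corresponding inequality forces the other one to be $+\infty$ as well; and if both are finite, the two inequalities together give the desired bound $|{\mathop{\mathrm{dil}}}_0^B\gamma-{\mathop{\mathrm{dil}}}_0^B\tilde\gamma|\le\varepsilon$.

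There is no real obstacle here; the only small point worth a word of care is the justification of subadditivity of $\limsup$ in the possibly mixed finite/infinite regime, which is why I would organize the argument so that each of the two triangle estimates is taken $\limsup$ separately and then combined, rather than trying to take $\limsup$ of a three-term inequality at once. Everything else is formal and uses nothing beyond the ingredients already assembled in Section~\ref{bas2} and the definitions collected in~\ref{dil0}.
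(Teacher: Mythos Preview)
Your proof is correct and follows essentially the same idea as the paper's: apply the triangle inequality for $d_B$, divide by $|t|$, and pass to the $\limsup$. Your handling of the $\limsup$ step via subadditivity is in fact more direct than the paper's version, which instead extracts sequences realizing ${\mathop{\mathrm{dil}}}_0^B\gamma$ and then subsequences along which $d_B(M|\tilde\nu_{t_n},\nu)/|t_n|$ converges in order to reach the same pair of inequalities.
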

\begin{proof}
Since ${\mathcal S}(M)\times{\mathcal S}(M)\ni\{\varrho,\omega\}\mapsto d_B(M|\varrho,\omega)$ is a distance, by the triangle inequality and by symmetry, for each $t\in I\cap \tilde{I}$,
$d_B(M|\nu_{t},\nu) \leq  d_B(M|\nu_{t},\tilde{\nu}_{t})+d_B(M|\tilde{\nu}_{t},\nu)$ and
$d_B(M|\tilde{\nu}_{t},\nu) \leq  d_B(M|\nu_{t},\tilde{\nu}_{t})+d_B(M|\nu_{t},\nu)$
hold. Hence
\begin{equation}\label{gleich1}
 \biggl|\frac{d_B(M|\nu_{t},\nu)}{|t|}-\frac{d_B(M|\tilde{\nu}_{t},\nu)}{|t|}\biggr|\leq \frac{d_B(M|\nu_{t},\tilde{\nu}_{t})}{|t|}
\end{equation}
Suppose $\lim_{n\to\infty} t_n=0$ with ${\mathop{\mathrm{dil}}}_0^B \gamma=\lim_{n\to \infty} d_B(M|\nu_{t_n},\nu)/|t_n|$ and such that
$$\lim_{n\to \infty} \frac{d_B(M|\nu_{t_n},\tilde{\nu}_{t_n})}{|t_n|}$$ exists. By \eqref{nuequiv0} the latter limit must be finite. Thus, in case if ${\mathop{\mathrm{dil}}}_0^B \gamma=\infty$ from   \eqref{nuequiv0}  and \eqref{gleich1}, $\lim_{n\to \infty} d_B(M|\tilde{\nu}_{t_n},\nu)/|t_n|=\infty$ follows. In view of \eqref{3b.3} then ${\mathop{\mathrm{dil}}}_0^B \tilde{\gamma}=\infty$ follows. On the other hand, in case of ${\mathop{\mathrm{dil}}}_0^B \gamma<\infty$ the conclusion by \eqref{gleich1} is that the sequence $d_B(M|\tilde{\nu}_{t_n},\nu)/|t_n|$ is bounded and therefore a converging subsequence exists, say with arguments  $s_k=t_{n_k}$. Application of \eqref{gleich1} with $t=s_k$ by supposition yields
$$
{\mathop{\mathrm{dil}}}_0^B \gamma-\lim_{k\to\infty} \frac{d_B(M|\tilde{\nu}_{s_n},\nu)}{|s_k|} \leq \lim_{k\to \infty} \frac{d_B(M|\nu_{s_k},\tilde{\nu}_{s_k})}{|s_k|}\leq \varepsilon
$$
By $\lim_{k\to\infty} d_B(M|\tilde{\nu}_{s_k},\nu)/|s_k|\leq {\mathop{\mathrm{dil}}}_0^B \tilde{\gamma}$ then
$
{\mathop{\mathrm{dil}}}_0^B\gamma-{\mathop{\mathrm{dil}}}_0^B \tilde{\gamma}\leq \varepsilon$ follows. Now, by reversing the r{\^o}les of $\gamma$ and $\tilde{\gamma}$, let $\lim_{n\to\infty} t_n=0$ with ${\mathop{\mathrm{dil}}}_0^B \tilde{\gamma}=\lim_{n\to \infty} d_B(M|\tilde{\nu}_{t_n},\nu)/|t_n|$ and $\lim_{n\to \infty} d_B(M|\nu_{t_n},\tilde{\nu}_{t_n})/|t_n|$ existing. Clearly, ${\mathop{\mathrm{dil}}}_0^B \tilde{\gamma}=\infty$ cannot hold, for otherwise by \eqref{gleich1} and supposition also ${\mathop{\mathrm{dil}}}_0^B\gamma=\infty$ had to be followed, but  which contradicts our case. Hence, ${\mathop{\mathrm{dil}}}_0^B \tilde{\gamma}<\infty$ must hold. Concluding analogously as above, there exists (with $s_k=t_{n_k}$) a subsequence with $\lim_{k\to\infty} d_B(M|\nu_{s_k},\nu)/|s_k|$ existing and
$${\mathop{\mathrm{dil}}}_0^B \tilde{\gamma}-\lim_{k\to\infty} \frac{d_B(M|\nu_{s_k},\nu)}{|s_k|}\leq\lim_{n\to \infty} \frac{d_B(M|\nu_{s_k},\tilde{\nu}_{s_k})}{|s_k|} \leq \varepsilon$$
From this owing to $\lim_{k\to\infty} d_B(M|\nu_{s_k},\nu)/|s_k|\leq {\mathop{\mathrm{dil}}}_0^B \gamma$ then ${\mathop{\mathrm{dil}}}_0^B \tilde{\gamma}-{\mathop{\mathrm{dil}}}_0^B \gamma\leq \varepsilon$ follows.
Thus, together with the above, what we have shown is that ${\mathop{\mathrm{dil}}}_0^B \gamma=\infty$ implies ${\mathop{\mathrm{dil}}}_0^B \tilde{\gamma}=\infty$, and ${\mathop{\mathrm{dil}}}_0^B \gamma<\infty$ implies
${\mathop{\mathrm{dil}}}_0^B \tilde{\gamma}<\infty$ with
$|{\mathop{\mathrm{dil}}}_0^B \tilde{\gamma}-{\mathop{\mathrm{dil}}}_0^B \gamma|\leq \varepsilon$ . By symmetry of the arguments the validity of the  assertion then follows.
\end{proof}

\subsubsection{Curves with differentiable parameterization}\label{ppath}
Start the investigations of \eqref{3b.3} with the case of $\gamma$  passing through $\nu$ and thereby possessing a tangent.
Let $I\ni t\,\mapsto\,\nu_t\in
{\mathcal S}(M)$ be a parameterized curve $\gamma$ passing through a state $\nu$ at  $t=0$. $\gamma$ will be
called `curve with
differentiable parameterization at $\nu$', or  `$C^1$-curve through $\nu$' for short, if
\begin{equation}\label{tangent}
   f=\|\cdot\|_1-\lim_{t\to 0}\frac{\nu_t-\nu}{t}=\frac{d}{d\/t}\,\nu_t\bigg|_{t=0}=\nu_t^{\,\prime}\bigl|_{t=0}
\end{equation}
exists. Then, refer to the limit $f\in M_{\mathsf h}^*$ as tangent form of $\gamma$ at $\nu$.

Note that throughout this paper the notion of a `parameterized curve passing through a state' requires the map to be   injective, at least around the state $\nu$ of interest. An elementary but often useful criterion in order to get verified that injectivity is fulfilled around a state $\nu$  is based on the subsequently described auxiliary technical tool. \begin{lemma}\label{constr}
Suppose $\gamma:I\ni t\,\longmapsto\nu_t\in {\mathcal S}(M)$ to be differentiable within $I$. Let $\nu=\nu_s$, for some inner
point $s\in I$. If the derivation map $I\ni t\longmapsto\nu_t^{\,\prime}$
is continuous at $t=s$ and is obeying $\nu_t^{\,\prime}|_{t=s}=f$, with $f\in M_{\mathsf h}^*\backslash \{{\mathsf 0}\}$, then
there exists some neighborhood $J$ of $s$ within $I$ such that the map $\gamma|J$ is injective and thus then defines a curve passing through $\nu$.
\end{lemma}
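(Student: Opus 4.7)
The plan is elementary and proceeds by extracting from the hypothesis $f \neq {\mathsf 0}$ a single scalar test function whose derivative at $s$ is strictly positive. First, I would pick a hermitian element $x = x^* \in M$ on which $f$ takes a strictly positive real value. Such an $x$ exists: given any $y \in M$ with $f(y) \neq 0$, write $y = a + {\mathsf i}\,b$ with $a, b \in M_{\mathsf h}$ and observe that $f(a), f(b) \in {\mathbb R}$ by hermiticity of $f$, so at least one of these real numbers is non-zero; after replacing the chosen element by its negative if necessary, one obtains $x \in M_{\mathsf h}$ with $f(x) > 0$.

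Next, I would introduce the real-valued auxiliary function $g : I \ni t \longmapsto \nu_t(x) \in {\mathbb R}$. By differentiability of $\gamma$ in the sense of \eqref{tangent} and linearity of evaluation at $x$, $g$ is differentiable at every $t \in I$ with $g'(t) = \nu_t^{\,\prime}(x)$. The point of the continuity hypothesis on $t \longmapsto \nu_t^{\,\prime}$ at $t = s$ (which is continuity in the functional norm $\|\cdot\|_1$) is that it delivers the continuity of $g'$ at $s$ through the trivial estimate
\[
|g'(t) - f(x)| = |(\nu_t^{\,\prime} - f)(x)| \leq \|\nu_t^{\,\prime} - f\|_1\,\|x\|.
\]
Since $g'(s) = f(x) > 0$ and $g'$ is continuous at $s$, the sign-persistence of a continuous function at a point of strict positivity yields an open neighborhood $J$ of $s$ within $I$ on which $g'(t) > 0$ throughout, so that $g$ is strictly increasing on $J$ by the mean value theorem.

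The injectivity conclusion is then immediate: if $t_1, t_2 \in J$ with $t_1 \neq t_2$, then $\nu_{t_1}(x) = g(t_1) \neq g(t_2) = \nu_{t_2}(x)$ by strict monotonicity of $g$, so in particular $\nu_{t_1} \neq \nu_{t_2}$ as states. Hence $\gamma|J$ is an injective continuous map into ${\mathcal S}(M)$ with $\gamma(s) = \nu$, which by the conventions recalled at the start of this section constitutes a parameterized curve passing through $\nu$. I do not foresee any real obstacle here; the entire content of the lemma is the reduction of injectivity near $s$ to sign-persistence of a single scalar derivative, made possible by the hypothesis that the tangent form $f$ is not the zero functional.
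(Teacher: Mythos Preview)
Your proof is correct and uses essentially the same ingredients as the paper: reduce to a real-valued function $t \mapsto \nu_t(x)$ for a hermitian $x$, invoke continuity of the derivative at $s$, and apply the mean value theorem. The only organizational difference is that you argue directly by fixing one $x$ with $f(x) > 0$ and obtaining strict monotonicity on a neighborhood, whereas the paper argues by contradiction, assuming sequences $s_k < t_k$ with $\nu_{s_k} = \nu_{t_k}$ collapsing to $s$ and deducing $f(x) = 0$ for every $x \in M_{\mathsf h}$; your direct route is marginally cleaner.
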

\begin{proof}
Let $\{s_k\},
\{t_k\}\subset I$ be sequences, with $s_k<t_k$, for
each $k\in
{{\mathbb{N}}}$, and $\lim_{k\to\infty} s_k=\lim_{k\to\infty} t_k=s$
fulfilled.  Assume $\nu_{s_k}=\nu_{t_k}$ holds, for each
$k\in {{\mathbb{N}}}$. By assumption, and since $\gamma$ maps into
hermitian linear forms over $M$ and the taking of the derivative
respects linearity, for each $x\in
M_h$ the real-valued function $I\ni t\longmapsto\nu_t(x)$ is obeying
$\nu_{s_k}(x)=\nu_{t_k}(x)$
for all $k\in
{{\mathbb{N}}}$, and is
differentiable, with continuous derivative at $t=s$, which obeys
$$\bigl(\nu_t(x)\bigr)^{\,\prime}=\nu_t^
{\,\prime}(x)$$ Let $x\in M_h$ be fixed. In line with
this, by the mean
value theorem of differential calculus it follows that, for each $k$, there is
$u_k(x)\in [s_k,t_k]$ obeying
$$\nu_{u_k(x)}^{\,\prime}(x)=0$$
Also, note that by assumption
on the sequences $\{s_k\},
\{t_k\}\subset I$ one has
$\bigl\{u_k(x)\}\subset I$, $\lim_{k\to\infty} u_k(x)=s$.
Continuity of $\nu_t^{\,\prime}(x)$\, at $t=s$ in view of the
previous then yields $$\nu_t^{\,\prime}(x)|_{t=s}=f(x)=0$$
Since the latter
conclusion works for each $x\in M_{\mathsf{h}}$, in view of
$f\in M_{\mathsf{h}}^*$ this violates the supposition $f\not={\mathsf 0}$ about $f$.
Thus, in summarizing, $f\not={\mathsf 0}$ implies that there cannot exist  sequences $\{s_k\},
\{t_k\}\subset I$ with the properties mentioned in context
with the continuously differentiable at $t=s$ map $\gamma$. But then, there has to exist some  neighborhood $J$ of $s$ within $I$ such that the restriction of $\gamma$ to $J$, $\gamma|J$ is injective. Since $t=s$ is an inner point of $I$,
this means that by $\gamma|J$ a parameterized curve passing through $\nu$ is given.
\end{proof}
\begin{remark}\label{f0}
By Lemma \ref{constr} the condition $f\not={\mathsf 0}$ in case of continuous differentiability of $\gamma$ at $\nu$ is sufficient to guarantee injectivity of $\gamma$ around the state in question, but it is not neccessary at all. Thus, its main use will be for constructing examples.
\end{remark}
\begin{definition}\label{e.1}
Let an extended positive, subadditive and (real) absolutely homogenous function $M_{\mathsf h}^*\ni f\longmapsto\|f\|_\nu\in \overline{{\mathbb R}}_+$ be defined by
\begin{equation}\label{tangentnorm}
\|f\|_\nu=\sup_{\{x\}}{\frac{1}{2}}\,\sqrt{{\sum_{j}}^\prime
\frac{f(x_j)^2}{\nu(x_j)\phantom{^2}}}
\end{equation}
with $\{x\}$ running through the set of all finite positive decompositions of the
unity within $M$. The $^\prime$ indicates that the summation has to be extended
only about those terms with non-vanishing denominator $\nu(x_j)$.
\end{definition}
In case if $f$ is a tangent form at $\nu$, $\|f\|_\nu$ provides a lower bound for \eqref{3b.3}.
\begin{lemma}\label{lowerbound}
Suppose $I\ni t\,\mapsto\,\nu_t\in
{\mathcal S}(M)$ with $\nu=\nu_0$ and $f=\frac{d}{d\/t}\,\nu_t\bigl|_{t=0}$. Then
\begin{equation}\label{lbound}
{\mathop{\mathrm{dil}}}_0^B \gamma\geq \liminf_{t\to 0}
\frac{d_B(M|\nu_{t},\nu)}{|t|}  \geq   \|f\|_\nu,
\end{equation}
\end{lemma}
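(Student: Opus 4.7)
\medskip

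\noindent\textbf{Proof proposal.} The first inequality is immediate from the definitions: since $\limsup \geq \liminf$ pointwise, the left-hand inequality in \eqref{lbound} is nothing but \eqref{3b.3} versus the corresponding $\liminf$. The substance of the lemma is therefore the bound
\[
\liminf_{t\to 0} \frac{d_B(M|\nu_t,\nu)}{|t|} \geq \|f\|_\nu .
\]
The plan is to fix an arbitrary finite positive decomposition of the unity $\{x\}=\{x_1,\dots,x_n\}$ in $M$, derive a lower estimate of $d_B(M|\nu_t,\nu)/|t|$ that in the limit reproduces the expression under the supremum in \eqref{tangentnorm}, and then pass to the supremum over $\{x\}$ at the end.

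First I would combine \eqref{pcont.1} and the general estimate \eqref{bas3cc}: since $\sum_k x_k={\mathsf 1}$ implies $\sum_k \nu_t(x_k)=\sum_k \nu(x_k)=1$, one obtains
\begin{equation*}
d_B(M|\nu_t,\nu)^2 \;=\; 2\bigl(1-F(M|\nu_t,\nu)\bigr) \;\geq\; 2\Bigl(1-\sum_k \sqrt{\nu_t(x_k)\nu(x_k)}\Bigr) \;=\; \sum_k \bigl(\sqrt{\nu_t(x_k)}-\sqrt{\nu(x_k)}\bigr)^2 .
\end{equation*}
Since all summands are non-negative, throwing away those indices $k$ for which $\nu(x_k)=0$ can only decrease the right-hand side. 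For the surviving indices, i.e.~for $k$ with $\nu(x_k)>0$, use the elementary identity $\sqrt{a}-\sqrt{b}=(a-b)/(\sqrt{a}+\sqrt{b})$ together with $\nu_t(x_k)=\nu(x_k)+tf(x_k)+{\mathbf o}(t)$, which comes from \eqref{tangent} applied to $x_k\in M$. This yields
\[
\frac{\bigl(\sqrt{\nu_t(x_k)}-\sqrt{\nu(x_k)}\bigr)^2}{|t|^{\,2}} \;\longrightarrow\; \frac{f(x_k)^{2}}{4\,\nu(x_k)} \qquad (t\to 0),
\]
because $\sqrt{\nu_t(x_k)}+\sqrt{\nu(x_k)}\to 2\sqrt{\nu(x_k)}\neq 0$ by continuity of $\nu_t$ in $t$.

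Combining the two steps, and noting that the surviving sum has only finitely many terms so that the $\liminf$ may be exchanged with it, gives
\[
\liminf_{t\to 0}\frac{d_B(M|\nu_t,\nu)^{\,2}}{|t|^{\,2}} \;\geq\; \frac{1}{4}\,{\sum_k}^{\prime}\frac{f(x_k)^{2}}{\nu(x_k)} .
\]
Taking square roots and then passing to the supremum over all finite positive decompositions $\{x\}$ reproduces exactly the expression \eqref{tangentnorm}, and therefore establishes $\liminf_{t\to 0} d_B(M|\nu_t,\nu)/|t|\geq \|f\|_\nu$, as desired.

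The argument is essentially a computation; the only mild subtlety is the handling of indices where $\nu(x_k)=0$, but since these contribute non-negative terms to the lower bound they may simply be dropped, which matches the prime convention in \eqref{tangentnorm}. I expect no real obstacle beyond this bookkeeping, provided one has \eqref{bas3cc} and the first-order expansion \eqref{tangent} available, both of which are in hand.
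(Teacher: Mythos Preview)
Your proof is correct. The overall skeleton matches the paper's---fix a finite positive decomposition, bound $d_B^2$ below via \eqref{pcont.1} and \eqref{bas3cc}, divide by $t^2$, take the limit, then the supremum over $\{x\}$---but your middle step is cleaner. The paper invokes an auxiliary appendix estimate (Lemma~\ref{vektoren}, inequality~\eqref{4.1}) to bound $2\bigl(1-\sum_k\sqrt{\xi_k\eta_k}\bigr)$ below by $\tfrac14\sum_{j\in J}(\xi_j-\eta_j)^2/\max\{\xi_j,\eta_j\}$, and only then passes to the limit. You instead use the elementary identity
\[
2\Bigl(1-\sum_k\sqrt{\nu_t(x_k)\nu(x_k)}\Bigr)=\sum_k\bigl(\sqrt{\nu_t(x_k)}-\sqrt{\nu(x_k)}\bigr)^2
\]
together with $\sqrt{a}-\sqrt{b}=(a-b)/(\sqrt{a}+\sqrt{b})$, which gives the factor $\tfrac14\,f(x_k)^2/\nu(x_k)$ directly and avoids the appendix lemma altogether. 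Both routes land on the same final bound; yours is more self-contained, while the paper's detour through $\max\{\xi_j,\eta_j\}$ is a vestige of a sharper pointwise inequality that is not actually needed here.
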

\begin{proof} Let $\{x\}=\left\{x_1,x_2,\ldots,x_n\right\}$, $n\in {{\mathbb{N}}}$, be a finite positive decomposition of the unity of $M$. That is, $x_j\in M_+$ for all $j\leq n$ and $\sum_j x_j={\mathsf 1}$. Let $t\in I$, $t\not=0$, be fixed but arbitrarily chosen. For each $j\in \{1,2,\ldots,n\}$, let us define $\xi_j=\nu_t(x_j)$ and $\eta_j=\nu(x_j)$. Then, $\xi_j\geq 0$ and $\eta_j\geq 0$ for each subscript $j$, and $\sum_j \xi_j=1$, $\sum_j \eta_j=1$. Let $J=\{j:\,\eta_j\not=0\}$. Then $J\not=\emptyset$. According to \eqref{bas3cc} the estimate
$${F(M|\nu_t,\nu)}\leq \sum_k \sqrt{\xi_k\eta_k}$$ has to be fulfilled. Thus, by formula \eqref{pcont.1}
\begin{equation*}
d_B(M|\nu_t,\nu)^2\geq 2\Bigl(1- \sum_k \sqrt{\xi_k\eta_k}\Bigr)
\end{equation*}
Now, by rearranging the auxiliary estimate \eqref{4.1}, which for probability vectors $\vec{\xi}$ and $\vec{\eta}$  according to Lemma \ref{vektoren} has to be fulfilled, the following estimate of the expression on the right hand side can be inferred to hold:
\begin{equation*}
 2\Bigl(1- \sum_k \sqrt{\xi_j\eta_j}\Bigr)\geq  {\frac{1}{4}}\,\sum_{j\in J}\frac{\left(\xi_j-\eta_j\right)^2}{\max\left\{
\xi_j,\eta_j\right\}}
\end{equation*}
Combining both estimates provides
\[
d_B(M|\nu_t,\nu)^2\geq {\frac{1}{4}}\,\sum_{j\in J}\frac{\left(\xi_j-\eta_j\right)^2}{\max\left\{
\xi_j,\eta_j\right\}}
\]
Re-substituting $\xi_j=\nu_t(x_j)$ and $\eta_j=\nu(x_j)$ and dividing by $t^2$ yields the estimate
\begin{equation}\label{pcont.1a}
\biggl(\frac{d_B(M|\nu_t,\nu)}{|t|}\biggr)^2\geq {\frac{1}{4}}\,\sum_{j\in J}\frac{\left(\frac{(\nu_t-\nu)}{t}(x_j)\right)^2}{\max\left\{
\nu_t(x_j),\nu(x_j)\right\}}
\end{equation}
for any $t\in I$, with $t\not=0$. Since by assumptions $\lim_{t\to 0} \nu_t(x_j)=\nu(x_j)$ and $ f(x_j)=\lim_{t\to 0}\frac{\nu_t-\nu}{t}(x_j)$, the expression on the right hand side possesses
a limit
\[
\lim_{t\to 0} {\frac{1}{4}}\,\sum_{j\in J}\frac{\left(\frac{(\nu_t-\nu)}{t}(x_j)\right)^2}{\max\left\{
\nu_t(x_j),\nu(x_j)\right\}}  ={\frac{1}{4}}\,\sum_{j\in J}\frac{f(x_j)^2}{\nu(x_j)\phantom{^2}} ={\frac{1}{4}}\,{\sum_{j}}^\prime
\frac{f(x_j)^2}{\nu(x_j)\phantom{^2}}
\]
From this by taking square roots in \eqref{pcont.1a} \[ \liminf_{t\to 0}
\frac{d_B(M|\nu_{t},\nu)}{|t|}\geq {\frac{1}{2}}\,\sqrt{{\sum_{j}}^\prime
\frac{f(x_j)^2}{\nu(x_j)\phantom{^2}}}
\]
follows, which inequality according to the above has to be fulfilled for any  finite positive decomposition $\{x\}=\left\{x_1,x_2,\ldots,x_n\right\}$ of the unity of $M$. From this in view of \eqref{tangentnorm} the asserted estimate \eqref{lowerbound} is obtained.
\end{proof}

\subsubsection{Locally implementable curves}\label{ipath}
Let $I\ni t\,\mapsto\,\nu_t\in
{\mathcal S}(M)$ be a parameterized curve $\gamma$ passing through a state $\nu$, at parameter value $t=0$. Suppose the parameterization is such that, for some unital $^*$-representation $\{\pi,{\mathcal H}_\pi\}$ and open interval $I_\pi\subset I$ around $0$, the $\pi$-fibres of $\nu_t$ for all $t\in I_\pi$ are nontrivial and therefore an implementing family $I_\pi\ni t\mapsto \varphi_t\in {\mathcal H}_\pi$ with $\varphi_t\in {\mathcal S}_{\pi,M}(\nu_t)$ exists. In such situation $\gamma$ is said to be `locally implementable at $\nu$'.  In addition,
$\gamma$ will be said to admit a `differentiable local implementation around $\nu$', or equivalently is said to be a `$C^1$-implementable curve around $\nu$', if the implementing family  $I_\pi\ni t\mapsto \varphi_t\in {\mathcal H}_\pi$ can be chosen such that the derivative
\begin{subequations}\label{ifam}
\begin{equation}\label{locabl}
  \psi= \frac{d}{d\/t}\,\varphi_t\Bigl|_{t=0}=\|\cdot\|_\pi-\lim_{t\to 0} \frac{\varphi_t-\varphi_0}{t}
\end{equation}
exists. This latter is equivalent with requiring existence of a local implementation  $I_\pi\ni t\mapsto \varphi_t\in {\mathcal S}_{\pi,M}(\nu_t)$ and a vector $\psi\in {\mathcal H}_\pi$ such that, with $\varphi=\varphi_0$,
\begin{equation}\label{ifamily}
    \varphi_t=\varphi+t\,\psi+{\mathbf o}(t)
\end{equation}
\end{subequations}
holds, for all $t\in I_\pi$, with the Landau symbol ${\mathbf o}(t)\in {\mathcal H}_\pi$ obeying $$\displaystyle\lim_{t\to 0}\, \frac{\|{\mathbf o}(t)\|_\pi}{|t|}=0$$
Let $\gamma$ be a parameterized curve passing through $\nu$. A unital $^*$-representation $\{\pi,{\mathcal H}_\pi\}$ with respect to which a $C^1$-implementation $(\varphi_t)\subset {\mathcal H}_\pi $ of $\gamma$ around $\nu$ exists will be said to be   `$\gamma$-compliant around $\nu$'. In this paper, mainly parametrized curves $\gamma$  passing through a given state $\nu$ and admitting a $\gamma$-compliant representation $\{\pi,{\mathcal H}_\pi\}$ around $\nu$  will be considered.

\begin{remark}\label{iimpdi}
\begin{enumerate}
\item\label{iimpdi0}
In case of existence the implementing family $I_\pi\ni t\mapsto \varphi_t\in {\mathcal H}_\pi$ with $\varphi_t\in {\mathcal S}_{\pi,M}(\nu_t)$ is highly non-uniquely determined and, although the parameterization $I\ni t\,\mapsto\,\nu_t\in
{\mathcal S}(M)$ by definition is continuous, this does not necessarily imply  continuity of the implementing family at $t=0$. However, arguments borrowed from the modular theory of $vN$-algebras show that a continuous implementation can always be chosen (see Lemma \ref{wurzeleins} in case of $M={\mathsf B}({\mathcal H})$ with separable  ${\mathcal H}$).
\item\label{iimpdi1} Suppose $\gamma$ admits a $C^1$-implementation around $\nu$. Then, differentiability of the parameterization $I\ni t\,\mapsto\,\nu_t$ at $\nu$ is implied. In fact, as a consequence of \eqref{locabl}  then $f=\frac{d}{d\/t}\,\nu_t|_{t=0}\in M_{\mathsf h}^*$ exists and at $x\in M$ reads
      \begin{equation}\label{iimpdi1a}
        f(x)=\langle \pi(x)\psi,\varphi\rangle_\pi+ \langle \pi(x)\varphi,\psi\rangle_\pi
      \end{equation}

  \item\label{iimpdi2} Relating \eqref{iimpdi1a}, it is easily inferred that $\psi$ at each place where it occurs can be replaced with $ \hat{\psi}\in  [\pi(M)^{\,\prime\prime}\varphi]$ given by $ \hat{\psi}=p_\pi^{\,\prime}(\varphi)\psi$. That is, at each $x\in M$  equivalently one also finds that
      \begin{equation}\label{iimpdi1b}
        f(x)=\langle \pi(x)\hat{\psi},\varphi\rangle_\pi+ \langle \pi(x)\varphi,\hat{\psi}\rangle_\pi
      \end{equation}
      is fulfilled, with the same linear form $f$ as given by formula \eqref{iimpdi1a}.
       \item\label{iimpdi3} Remark that if $$\langle \xi,\chi\rangle_{\pi,{\mathbb R}}=\Re\langle \xi,\chi\rangle_\pi=\frac{1}{2}\,\bigl( \langle \xi,\chi\rangle_\pi +\langle\chi,\xi\rangle_\pi\bigr)$$ is considered for $\xi$ and $\chi$ varying through all of ${\mathcal H}_\pi$, then by $\langle \cdot,\cdot\rangle_{\pi,\mathbb R}$ a real scalar product on ${\mathcal H}_\pi$ is given provided the latter is viewed as a real linear space. In the following, whenever the $\langle \cdot,\cdot\rangle_{\pi,\mathbb R}$ corresponding real Euclidean structure on ${\mathcal H}_\pi$, or on some other closed complex linear subspace ${\mathcal K}\subset {\mathcal H}_\pi$, will be concerned with, then this change of view will be notified by attaching a subscript ${\mathbb R}$ to the symbols ${\mathcal H}_\pi$ or ${\mathcal K}$, respectively, and $({\mathcal H}_\pi)_{\mathbb R}$ or ${\mathcal K}_{\mathbb R}$ will be used instead.
        \item\label{iimpdi3a}
        For the tangent form $f$ of \eqref{iimpdi1a} obviously $f({\mathsf 1})=0$ has to hold. Hence, in line with \eqref{iimpdi3}, the tangent vector $\psi$ figuring in \eqref{ifam} satisfies the relation
       \begin{subequations}\label{iimpd1aa}
       \begin{equation}\label{iimpd3b}
        \langle \psi,\varphi\rangle_{\pi,\mathbb R}=0
        \end{equation}
That is, $\psi\in [\varphi]_{\mathbb R}^\perp$, where $[\varphi]_{\mathbb R}^\perp$ is the $\langle \cdot,\cdot\rangle_{\pi,\mathbb R}$-orthogonal complement of the real (one dimensional) subspace generated by $\varphi$ within
$({\mathcal H}_\pi)_{\mathbb R}$. Hence, according to \eqref{iimpdi1b} and with $N=\pi(M)^{\,\prime\prime}$, one then also has
\begin{equation}\label{iimpdi4}
\hat{\psi}\in [{N\varphi}_{\,\mathbb R}]\cap [\varphi]_{\mathbb R}^\perp
\end{equation}
       \end{subequations}
for $\hat{\psi}=p_\pi^{\,\prime}(\varphi)\psi$ occurring in \eqref{iimpdi1b}
\end{enumerate}
\end{remark}
Note that the closure $[N_{\mathsf h}\varphi]$ of the subset $N_{\mathsf h}\varphi =\{x\varphi:\,x^*=x,\,x\in N\}$ of $[N\varphi]$ in a natural manner is a real linear subspace of $[{N\varphi}_{\,\mathbb R}]$, $[N_{\mathsf h}\varphi]\subset [{N\varphi}_{\,\mathbb R}]$. For the orthogonal complement
$[{N\varphi}_{\,\mathbb R}]\ominus [N_{\mathsf h}\varphi]$ of this real linear subspace with respect to  $[{N\varphi}_{\,\mathbb R}]$ one has the following result (${\mathsf{i}}$ stands for the imaginary unit):
\begin{lemma}\label{ocomp}
\hfill{}$[{N\varphi}_{\,\mathbb R}]\ominus [N_{\mathsf h}\varphi]= {\mathsf{i}}[N^{\,\prime}_{\mathsf h}\varphi] \cap [{N\varphi}_{\,\mathbb R}]={\mathsf{i}}[p_\pi^{\,\prime}(\varphi) N^{\,\prime}_{\mathsf h}\varphi]$\hfill{}\phantom{}
\end{lemma}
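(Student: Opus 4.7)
The plan is to verify the two claimed equalities by passing through the compressed algebra $N_0 := p\,N\,p$ (where $p = p_\pi^{\,\prime}(\varphi)$), which acts on ${\mathcal K} := [N\varphi]$ and for which $\varphi$ is cyclic; its commutant on ${\mathcal K}$ is $(N_0)^{\,\prime} = p\,N^{\,\prime}\,p$, and a routine check shows $[N_{\mathsf h}\varphi] = [(N_0)_{\mathsf h}\varphi]$ and $[p\,N^{\,\prime}_{\mathsf h}\varphi] = [(N_0)^{\,\prime}_{\mathsf h}\varphi]$.

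I would begin with the second equality ${\mathsf i}[N^{\,\prime}_{\mathsf h}\varphi] \cap [N\varphi_{\mathbb R}] = {\mathsf i}[p\, N^{\,\prime}_{\mathsf h}\varphi]$. For the inclusion ``$\supseteq$'', the compression $pyp$ of any $y \in N^{\,\prime}_{\mathsf h}$ is still self-adjoint and lies in $N^{\,\prime}$ (as $p \in N^{\,\prime}$ is self-adjoint), so $py\varphi = (pyp)\varphi$ belongs to $N^{\,\prime}_{\mathsf h}\varphi \cap {\mathcal K}$; for the reverse, any ${\mathsf i}\eta \in {\mathcal K}$ with $\eta = \lim y_n\varphi$, $y_n \in N^{\,\prime}_{\mathsf h}$, satisfies ${\mathsf i}\eta = p\cdot {\mathsf i}\eta = \lim {\mathsf i}\,p y_n\varphi$. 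I would then verify the easy orthogonality inclusion ${\mathsf i}[p\,N^{\,\prime}_{\mathsf h}\varphi] \subseteq [N\varphi_{\mathbb R}] \ominus [N_{\mathsf h}\varphi]$: for $a \in N_{\mathsf h}$ and $y \in N^{\,\prime}_{\mathsf h}$, using $[p,a] = 0$ and $p\varphi = \varphi$,
\[
\langle a\varphi,\;{\mathsf i}\,p y\varphi\rangle_{\pi,{\mathbb R}} \;=\; -\im\langle a\varphi, y\varphi\rangle_\pi \;=\; -\im\langle a y\,\varphi,\varphi\rangle_\pi,
\]
and the product $a y = y a$ of two commuting self-adjoints is self-adjoint, making the matrix element real and the real inner product zero; continuity extends this to the full real closure.

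The real work is the reverse inclusion $[N\varphi_{\mathbb R}] \ominus [N_{\mathsf h}\varphi] \subseteq {\mathsf i}[p\,N^{\,\prime}_{\mathsf h}\varphi]$. Given $\xi \in {\mathcal K}$ real-orthogonal to $[N_{\mathsf h}\varphi]$, set $\zeta = -{\mathsf i}\xi$; the hypothesis then asserts that the vector functional $\omega_\zeta(x) := \langle x\varphi,\zeta\rangle_\pi$ on $N_0$ is hermitian and normal. The goal is to conclude $\zeta \in [(N_0)^{\,\prime}_{\mathsf h}\varphi]$. My strategy has four steps: (i) with $s \in N_0$ the support projection of $\omega_\varphi$ on $N_0$, show $(\mathbf 1 - s)\zeta = 0$, so that $\zeta$ lies in the cyclic subspace $[(N_0)^{\,\prime}\varphi] = s\,{\mathcal K}$; (ii) exploit that $\varphi$ is cyclic for $N_0$ (hence separating for $(N_0)^{\,\prime}$) to invoke the Murray--von Neumann representation, writing $\zeta = A\varphi$ for a closed densely defined operator $A$ affiliated with $(N_0)^{\,\prime}$ with $\varphi \in \operatorname{dom}(A)$; (iii) decompose $A = A_1 + {\mathsf i}\,A_2$ into self-adjoint affiliated parts, whose spectral resolutions give $A_j\varphi \in [(N_0)^{\,\prime}_{\mathsf h}\varphi]$ via bounded approximation; and (iv) use the hermiticity of $\omega_\zeta$ together with the commutation of $N_0$ and $(N_0)^{\,\prime}$ to force $A_1\varphi = 0$, leaving $\zeta = {\mathsf i}\,A_2\varphi \in {\mathsf i}[(N_0)^{\,\prime}_{\mathsf h}\varphi]$. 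The main obstacle is step (i), the reduction to the cyclic subspace, which requires a careful interplay between the support projection $s$ and the real structure imposed by $N_{\mathsf h}$: one must show that any $\zeta_2 \in (\mathbf 1 - s){\mathcal K}$ satisfying the hermiticity condition must vanish, exploiting that cyclicity of $\varphi$ for $N_0$ guarantees both real and imaginary parts of $\langle a\varphi,\zeta_2\rangle$ can be realized independently along $N_{\mathsf h}$-orbits modulo the support projection.
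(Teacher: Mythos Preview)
Your second equality and the easy inclusion are fine, and step~(i) actually goes through more cleanly than you suggest: hermiticity gives $\omega_\zeta\bigl(({\mathsf 1}-s)ys\bigr)=\overline{\omega_\zeta\bigl(sy^*({\mathsf 1}-s)\bigr)}=0$ since $({\mathsf 1}-s)\varphi=0$, so $\langle y\varphi,({\mathsf 1}-s)\zeta\rangle=0$ for all $y\in N_0$ and cyclicity forces $({\mathsf 1}-s)\zeta=0$. The genuine gap is in (iii)--(iv). A Cartesian decomposition $A=A_1+{\mathsf{i}}A_2$ with self-adjoint affiliated $A_j$ is simply not available for unbounded closed operators: the formal expression $\tfrac12(A+A^*)$ need not be densely defined, let alone essentially self-adjoint, so there is no spectral resolution to appeal to. In fact the hermiticity of $\omega_\zeta$ already shows directly that the natural operator $A_0\colon x\varphi\mapsto x\zeta$ on $N_0\varphi$ is \emph{symmetric} (verify $\langle A_0x\varphi,y\varphi\rangle=\langle x\varphi,A_0y\varphi\rangle$ from $\omega_\zeta(a)=\overline{\omega_\zeta(a^*)}$ with $a=y^*x$), so there is no imaginary part to kill; what is actually missing is a \emph{self-adjoint} extension of this symmetric $A_0$ that remains affiliated with $(N_0)^{\,\prime}$ --- only then do spectral truncations place $\zeta=A_0\varphi$ into $[(N_0)^{\,\prime}_{\mathsf h}\varphi]$. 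But $A_0$ is not visibly semibounded, and bare symmetric operators need not admit self-adjoint extensions at all. (Aside: the target is $\zeta\in[(N_0)^{\,\prime}_{\mathsf h}\varphi]$, not ${\mathsf{i}}$ times it, so the conclusion of (iv) has a sign slip.)

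The paper closes exactly this gap by Jordan-decomposing at the level of the \emph{form} rather than the operator. After reformulating the hypothesis as hermiticity of the vector functional $f_{{\mathsf{i}}\xi,\varphi}$ on $N$, it invokes Corollary~\ref{hermform}: one picks a projection $q$ splitting the hermitian form into its positive and negative parts, and for each positive piece the associated operator is \emph{positive} symmetric, hence semibounded, so the Friedrichs extension (Lemma~\ref{affipos}) supplies a positive self-adjoint operator affiliated with $N^{\,\prime}$. Combining the two pieces gives $p_\pi^{\,\prime}(\varphi)({\mathsf{i}}\xi)\in[N^{\,\prime}_+\varphi]-[N^{\,\prime}_+\varphi]\subset[N^{\,\prime}_{\mathsf h}\varphi]$. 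The moral: split the hermitian \emph{form} into positive parts, not the affiliated \emph{operator} into self-adjoint ones.
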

\begin{proof} For $\varphi,\chi\in {\mathcal H}_\pi$ let a linear form $f_{\chi,\varphi}$ over the $vN$-algebra $N$ be defined by
$$f_{\chi,\varphi}(x)=\langle x \chi,\varphi\rangle_\pi$$ for all $x\in N$. Suppose $\xi\in [N\varphi]$. The condition $\langle \chi,\xi\rangle_{\pi,\mathbb R}=0$ for all $\chi\in [N_{\mathsf h}\varphi]$ then will be shown to be equivalent to the hermiticity of the linear form $f_{\,{\mathsf{i}}\xi,\varphi}$ on $N$.
In fact, by supposition on $\xi$ the condition is equivalent with $\langle x\varphi,\xi\rangle_{\pi,\mathbb R}=0$, for each $x\in N$ with $x^*=x$. Hence, by definition of $\langle \cdot,\cdot\rangle_{\pi,\mathbb R}$ from this equivalence with
$\langle x\varphi,\xi\rangle_\pi=-\langle \xi,x\varphi\rangle_\pi=-\langle x\xi,\varphi\rangle_\pi $ can be followed, for any hermitian element $x\in N_{\mathsf h}$. Thus, $\langle x\varphi,\xi\rangle_{\pi,\mathbb R}=0$, for all $x\in N_{\mathsf h}$, is the same as  $\langle x\varphi,\xi\rangle_\pi=-\langle x\xi,\varphi\rangle_\pi $ for hermitian $x$. From this owing to ${\mathsf{i}}\langle x\varphi,\xi\rangle_\pi=-\langle x\varphi,{\mathsf{i}}\xi\rangle_\pi$ and ${\mathsf{i}}\langle x\xi,\varphi\rangle_\pi=\langle x\,{\mathsf{i}}\xi,\varphi\rangle_\pi$ obviously $\langle x\varphi,{\mathsf{i}}\xi\rangle_\pi=\langle x\,{\mathsf{i}}\xi,\varphi\rangle_\pi$ follows, for hermitian $x\in N$, and vice versa. Thus, the condition $\langle \chi,\xi\rangle_{\pi,\mathbb R}=0$ for all $\chi\in [N_{\mathsf h}\varphi]$ holds if, and only if,
$f_{\varphi,{\mathsf{i}}\xi}(x)=f_{\,{\mathsf{i}}\xi,\varphi}(x)$ holds, for all hermitian  $x\in N$. Clearly, since each $x\in N$ uniquely decomposes as $x=a+{\mathsf{i}}b$, with hermitian $a,b\in N$,  if $f_{\varphi,{\mathsf{i}}\xi}$ and  $f_{\,{\mathsf{i}}\xi,\varphi}$ equal to each other on hermitian elements, this by complex linearity of the scalar product extends to all of $N$. Thus $\langle \chi,\xi\rangle_{\pi,\mathbb R}=0$ for all $\chi\in [N_{\mathsf h}\varphi]$ holds if, and only if $f_{\varphi,{\mathsf{i}}\xi}=f_{\,{\mathsf{i}}\xi,\varphi}$, which condition is equivalent to hermiticity of $f_{\,{\mathsf{i}}\xi,\varphi}$.
Since according to the supposition
${\mathsf{i}}\xi\in[N\varphi]$ is fulfilled, ${\mathsf{i}}\xi$ is invariant under the action of the projection operator $p_\pi^{\,\prime}(\varphi)$, $p_\pi^{\,\prime}(\varphi)\,{\mathsf{i}}\xi={\mathsf{i}}\xi$. But according to Corollary \ref{hermform} hermiticity of $f_{\,{\mathsf{i}}\xi,\varphi}$ on the $vN$-algebra $N$ holds if, and only if, ${\mathsf{i}}\xi\in [N^{\,\prime}_{\mathsf h}\varphi]$, or equivalently $\xi\in{\mathsf{i}} [N^{\,\prime}_{\mathsf h}\varphi]$ holds.  Thus ${\mathsf{i}}[N^{\,\prime}_{\mathsf h}\varphi]\cap [{N\varphi}_{\mathbb R}]$ is the orthogonal complement of $[N_{\mathsf h}\varphi]$ within $ [{N\varphi}_{\mathbb R}]$. Since
$p_\pi^{\,\prime}(\varphi){\mathcal H}_\pi=[N\varphi]$ holds, one finally has ${\mathsf{i}}[p_\pi^{\,\prime}(\varphi) N^{\,\prime}_{\mathsf h}\varphi]=p_\pi^{\,\prime}(\varphi) {\mathsf{i}}[N^{\,\prime}_{\mathsf h}\varphi]={\mathsf{i}}[N^{\,\prime}_{\mathsf h}\varphi]\cap [{N\varphi}_{\mathbb R}]$.
\end{proof}

By Lemma \ref{ocomp},   $\psi\in {\mathcal H}_\pi$ can be decomposed as follows (cf. Remark \ref{iimpdi} \eqref{iimpdi2}):
\begin{equation}\label{decompo}
  \psi =p_\pi^{\,\prime}(\varphi)\psi+ p_\pi^{\,\prime}(\varphi)^\perp\psi
   = \hat{\psi} + p_\pi^{\,\prime}(\varphi)^\perp\psi
   =\hat{\psi}_0+ \hat{\psi}_1  + p_\pi^{\,\prime}(\varphi)^\perp\psi\,,
\end{equation}
where $\hat{\psi}_0$ and $\hat{\psi}_1$ are the uniquely determined components of $\hat{\psi}=p_\pi^{\,\prime}(\varphi)\psi$ in respect of the decomposition of $[{N\varphi}_{\,\mathbb R}]$ into $[N_{\mathsf h}\varphi]$ and its
$\langle\cdot,\cdot\rangle_{\pi,\mathbb R}$-orthogonal complement. That is, one has $\hat{\psi}_0\in [N_{\mathsf h}\varphi]$ and  $\hat{\psi}_1\in {\mathsf{i}}[p_\pi^{\,\prime}(\varphi) N^{\,\prime}_{\mathsf h}\varphi]$.

In addition assume that $\langle \psi,\varphi\rangle_{\pi,\mathbb R}=0$ holds. According to \eqref{iimpdi4} one then has $$\langle \hat{\psi},\varphi\rangle_{\pi,\mathbb R}=0$$ Especially,  $\varphi\in[N_{\mathsf h}\varphi]$ is $\langle\cdot,\cdot\rangle_{\pi,\mathbb R}$-orthogonal to the orthogonal complement  ${\mathsf{i}}[p_\pi^{\,\prime}(\varphi) N^{\,\prime}_{\mathsf h}\varphi]$ of $[N_{\mathsf h}\varphi]$. Hence, $\langle \hat{\psi}_1,\varphi\rangle_{\pi,\mathbb R}=0$. But then $$\langle \hat{\psi}_0,\varphi\rangle_{\pi,\mathbb R}=\langle \hat{\psi}-\hat{\psi}_1,\varphi\rangle_{\pi,\mathbb R}=
\langle \hat{\psi},\varphi\rangle_{\pi,\mathbb R}-\langle \hat{\psi}_1,\varphi\rangle_{\pi,\mathbb R} =0$$ follows.
Note that for each vector $\chi\in [N_{\mathsf h}\varphi]$ obviously $\langle\chi,\varphi\rangle_{\pi,\mathbb R}= \langle\chi,\varphi\rangle_\pi$ holds. In fact, this follows since for each $x\in N$ with $x=x^*$, $\langle x \varphi,\varphi\rangle_\pi = \langle \varphi,x\varphi\rangle_\pi$ is fulfilled. The fact holds on a dense subset, and by continuity extends to all of $ [N_{\mathsf h}\varphi]$. Thus,
\begin{equation}\label{echtorth}
\langle \hat{\psi}_0,\varphi\rangle_\pi=0
\end{equation}
for any $\psi\in {\mathcal H}_\pi$ obeying \eqref{iimpd3b}. Then, the following auxiliary fact holds.
\begin{lemma}\label{auxx}
Let $N=\pi(M)^{\,\prime\prime}$, and $\psi\in {\mathcal H}_\pi$ with $\langle\psi,\varphi\rangle_{\pi,\mathbb R}=0$. Then, either $p_\pi(\varphi)\in N$ is a minimal orthoprojection of $N$, in which case $p_\pi(\varphi)^\perp \hat{\psi}_0=\hat{\psi}_0$ is fulfilled, or one has
\begin{equation}\label{decompoapprox1}
 \hat{\psi}_0=\lim_{n\to\infty} x_n\varphi
\end{equation}
with  a sequence $\{x_n\}\subset N_{\mathsf h}$ obeying $\langle x_n \varphi,\varphi\rangle_\pi=0$ and $p_\pi(\varphi)x_n \varphi\not=0$, for all $ n\in {{\mathbb{N}}}$.
\end{lemma}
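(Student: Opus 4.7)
The plan is to start from the fact that, by definition, $\hat{\psi}_0\in[N_{\mathsf h}\varphi]$, so there is a sequence $\{y_n\}\subset N_{\mathsf h}$ with $y_n\varphi\to\hat{\psi}_0$. Set $\lambda_n=\langle y_n\varphi,\varphi\rangle_\pi\in{\mathbb R}$ (real by hermiticity of $y_n$); by \eqref{echtorth} one has $\lambda_n\to 0$. Since $p_\pi(\varphi)\in N$ and $p_\pi(\varphi)\varphi=\varphi$, the corrected elements
\[
x_n=y_n-\lambda_n p_\pi(\varphi)\in N_{\mathsf h}
\]
satisfy $\langle x_n\varphi,\varphi\rangle_\pi=0$ and $x_n\varphi=y_n\varphi-\lambda_n\varphi\to\hat{\psi}_0$. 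This already produces a candidate sequence; only the condition $p_\pi(\varphi)x_n\varphi\neq 0$ is not automatic.

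Now I would split into two cases according to whether $p_\pi(\varphi)$ is minimal in $N$. In the minimal case $p_\pi(\varphi)Np_\pi(\varphi)={\mathbb C}\,p_\pi(\varphi)$; writing $p_\pi(\varphi)xp_\pi(\varphi)=\alpha(x)p_\pi(\varphi)$ for $x\in N$ and evaluating the vector state at $\varphi$ gives $\alpha(x)=\langle x\varphi,\varphi\rangle_\pi$, whence (using $p_\pi(\varphi)\varphi=\varphi$)
\[
p_\pi(\varphi)\,x\varphi=\langle x\varphi,\varphi\rangle_\pi\,\varphi\qquad (x\in N).
\]
Applied to the sequence $x_n$ above, this yields $p_\pi(\varphi)x_n\varphi=0$, i.e.\ $p_\pi(\varphi)^\perp x_n\varphi=x_n\varphi$, and passing to the limit gives $p_\pi(\varphi)^\perp\hat{\psi}_0=\hat{\psi}_0$, as asserted.

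In the non-minimal case, the corner $p_\pi(\varphi)Np_\pi(\varphi)$ has (real) dimension strictly greater than one, so one can pick $z\in p_\pi(\varphi)N_{\mathsf h}p_\pi(\varphi)$ which is not a scalar multiple of $p_\pi(\varphi)$ and set $z_0=z-\nu_\pi(z)p_\pi(\varphi)$. Then $z_0\in N_{\mathsf h}$, $\langle z_0\varphi,\varphi\rangle_\pi=0$, and $z_0\neq{\mathsf 0}$. The key step is to observe that $\nu_\pi$ is faithful on the reduced algebra $p_\pi(\varphi)Np_\pi(\varphi)$ because $p_\pi(\varphi)=s(\nu_\pi)$; hence $z_0\varphi\neq{\mathsf 0}$, and since $z_0=p_\pi(\varphi)z_0p_\pi(\varphi)$ one automatically has $p_\pi(\varphi)z_0\varphi=z_0\varphi\neq{\mathsf 0}$. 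Given the candidate sequence $x_n$ constructed in the first paragraph, I would then define
\[
\tilde{x}_n=x_n+\varepsilon_n z_0,\qquad \varepsilon_n\downarrow 0\text{ with } \varepsilon_n\neq 0,
\]
with $\varepsilon_n$ chosen small enough to preserve $\tilde{x}_n\varphi\to\hat{\psi}_0$; by construction $\tilde{x}_n\in N_{\mathsf h}$, $\langle\tilde{x}_n\varphi,\varphi\rangle_\pi=0$, and $p_\pi(\varphi)\tilde{x}_n\varphi=p_\pi(\varphi)x_n\varphi+\varepsilon_n z_0\varphi$; if the first summand vanishes one directly has $p_\pi(\varphi)\tilde{x}_n\varphi=\varepsilon_n z_0\varphi\neq{\mathsf 0}$, and otherwise a further small adjustment of $\varepsilon_n$ (away from at most one bad value per index) again makes the sum nonzero.

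The only genuine subtlety is the faithfulness of $\nu_\pi$ on $p_\pi(\varphi)Np_\pi(\varphi)$, which however is immediate from the general support orthoprojection calculus for normal states on $vN$-algebras (cf.\ \eqref{traeger}), and the verification that the corrective term $\varepsilon_n z_0$ can be chosen so as not to destroy $\langle\tilde{x}_n\varphi,\varphi\rangle_\pi=0$—both of which are routine once the dichotomy above is in place.
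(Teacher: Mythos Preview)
Your proof is correct and follows essentially the same route as the paper's: approximate $\hat{\psi}_0$ by $y_n\varphi$ with $y_n\in N_{\mathsf h}$, subtract off the appropriate scalar so that $\langle x_n\varphi,\varphi\rangle_\pi=0$, and in the non-minimal case perturb by a small multiple of a corner element to force $p_\pi(\varphi)x_n\varphi\neq 0$. The only cosmetic differences are that the paper subtracts a multiple of ${\mathsf 1}$ rather than $p_\pi(\varphi)$, and perturbs before correcting rather than after; your explicit appeal to faithfulness of $\nu_\pi$ on $p_\pi(\varphi)Np_\pi(\varphi)$ is exactly what underlies the paper's implication $p_\pi(\varphi)x_n\varphi=0\Rightarrow p_\pi(\varphi)x_np_\pi(\varphi)=0$.
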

\begin{proof}
Let $\{a_n\} \subset N$, with $a_n=a_n^*$ for all $n\in {{\mathbb{N}}}$, be chosen such that $\hat{\psi}_0=\lim_{n\to\infty} a_n\varphi$. Because of $\hat{\psi}_0\in [N_{\mathsf h}\varphi]$ this can always be accomplished. If $p_\pi(\varphi)\in N$ is minimal, then $p_\pi(\varphi)a_n p_\pi(\varphi)=\langle a_n\varphi,\varphi\rangle_\pi p_\pi(\varphi)$ is fulfilled, for each $n\in {{\mathbb{N}}}$. Hence, owing to
 $p_\pi(\varphi)a_n\varphi=p_\pi(\varphi)a_n p_\pi(\varphi)\varphi$ and \eqref{echtorth} one infers that $$p_\pi(\varphi)\hat{\psi}_0=\lim_{n\to\infty} p_\pi(\varphi)a_n\varphi=\lim_{n\to\infty} \langle a_n\varphi,\varphi\rangle_\pi \varphi=\langle \hat{\psi}_0,\varphi\rangle_\pi \varphi=0$$
 On the other hand, if $p_\pi(\varphi)\in N$ is not minimal, there has to exist $b\in N_{\mathsf h}$ such that
 $p_\pi(\varphi)b p_\pi(\varphi)\not=\langle b\varphi,\varphi\rangle_\pi p_\pi(\varphi)$. It is obvious that by choosing a null-sequence $(\varepsilon_n)$ of non-negative reals appropriately, each $a_n$ can be perturbed to $b_n=a_n+\varepsilon_n b$ such that
$p_\pi(\varphi)b_n p_\pi(\varphi)\not=\langle b_n\varphi,\varphi\rangle_\pi p_\pi(\varphi)$ can be accomplished. Note that then  $\hat{\psi}_0=\lim_{n\to\infty} b_n\varphi$ is fulfilled. Defining $x_n=b_n-\langle b_n\varphi,\varphi\rangle_\pi{\mathsf 1}$, for each $n\in {{\mathbb{N}}}$ we then have $\langle x_n\varphi,\varphi\rangle_\pi=0$. Moreover, $p_\pi(\varphi)x_n\varphi\not=0$ must be fulfilled. So assuming ad absurdum that $p_\pi(\varphi)x_n\varphi=0$, this implies   $p_\pi(\varphi)x_np_\pi(\varphi)=0$. However, the latter is equivalent to   $p_\pi(\varphi)b_np_\pi(\varphi)-\langle b_n\varphi,\varphi\rangle_\pi p_\pi(\varphi)=0$, which contradicts the assumptions on $b_n$.
\end{proof}
Note that owing to $ \hat{\psi}_1\in {\mathsf{i}}[p_\pi^{\,\prime}(\varphi) N^{\,\prime}_{\mathsf h}\varphi]$ the vector $ \hat{\psi}_1$ admits an approximating sequence constructed by means of hermitian operators $z_n\in N^{\,\prime}$ as follows:
\begin{equation}\label{decompoapprox2}
 \hat{\psi}_1=\lim_{n\to\infty} {\mathsf{i}}\, z_n\varphi,\text{ with $z_n=z_n^*$, and $z_n=p_\pi^{\,\prime}(\varphi)z_n$ , for all  $n\in {{\mathbb{N}}}$}\,.
\end{equation}
\begin{remark}\label{agg}
\begin{enumerate}
\item\label{agg1}
Note that $ \hat{\psi}_0=\lim_{n\to\infty} x_n\varphi$, with $\{x_n\}\subset N_{\mathsf h}$, $\langle x_n \varphi,\varphi\rangle_\pi=0$ for all  $n\in {{\mathbb{N}}}$, can be accomplished in any case. The crucial point of Lemma \ref{auxx} is that $p_\pi(\varphi)x_n \varphi\not=0$  can be  satisfied, under certain conditions.
\item\label{agg2}
As a trivial consequence of the previous and \eqref{decompoapprox2} we see that, on a $vN$-algebra $N$ acting on a Hilbert space ${\mathcal H}$, for each vector $\varphi\in {\mathcal H}$ with $\langle\psi,\varphi\rangle_{\mathbb R}=0$ the following useful fact is true: the set
of all aggregates of the form $x\varphi+{\mathsf{i}} z\varphi$, with hermitian operators $x\in N_{\mathsf h}$ and $z\in N^{\,\prime}_{\mathsf h}$ such that $\langle x\varphi,\varphi\rangle=0$ and $z=p^{\,\prime}(\varphi)z $ are satisfied, have to be dense within $$[N\varphi]\cap [\varphi]_{\mathbb R}^\perp$$
\item\label{agg3}
 For a  tangent vector $\psi$ as in \eqref{locabl} let us consider the decomposition in accordance with \eqref{decompo}. In view of Lemma \ref{ocomp} and \eqref{decompoapprox2} it follows that \eqref{iimpdi1a} and \eqref{iimpdi1b} can be complemented by the fact that for each $x\in M$
\begin{equation}\label{agg3a}
    f(x)=\langle\pi(x)\hat{\psi}_0,\varphi\rangle_\pi+\langle \pi(x)\varphi,\hat{\psi}_0\rangle_\pi
\end{equation}
is fulfilled, and that the vector $\hat{\psi}_0$ in this situation is the unique vector within $[N_{\mathsf h}\varphi]$ satisfying \eqref{agg3a} for the tangent form $f$ of \eqref{iimpdi1a}.
\end{enumerate}
\end{remark}

\subsubsection{Upper bounds for the local dilation function}\label{uppb}
With the help of the decomposition in \eqref{decompo} we are going to derive upper bounds of the local dilation function of a curve $$\gamma: I\ni t\,\mapsto\,\nu_t\in
{\mathcal S}(M) $$ which is passing through $\nu$ at $t=0$ and which is admitting a $\gamma$-compliant  unital $^*$-representation $\{\pi,{\mathcal H}_\pi\}$ around $\nu$. In line with this, let  $$I_\pi\ni t\mapsto \varphi_t\in {\mathcal S}_{\pi,M}(\nu_t)$$
be a corresponding local implementation of $\gamma$ around $\nu$ and which is differentiable at $t=0$. Then, there is a vector $\psi\in {\mathcal H}_\pi$ such that, with $\varphi=\varphi_0$, the conditions of \eqref{ifam} are fulfilled. Let $N=\pi(M)^{\prime\prime}$. We consider $\beta_\pi(\psi)$ defined by
\begin{equation}\label{beta}
   \beta_\pi(\psi)=\sqrt{\|\hat{\psi}_0\|_\pi^2+\|p_\pi(\varphi)^\perp p_\pi^{\,\prime}(\varphi)^\perp\psi\|_\pi^2}
\end{equation}
and where $\hat{\psi}_0$ is the best $\langle\cdot,\cdot\rangle_{\pi,\mathbb R}$-approximation of $\psi\in ({\mathcal H}_\pi)_{\mathbb R}$ in $[N_{\mathsf h}\varphi]$.
\begin{lemma}\label{upper} In this context, and with $\beta_\pi(\psi)$ from \eqref{beta}, the following holds:
\begin{equation}\label{upperest}
   {\mathop{\mathrm{dil}}}_0^B \gamma\leq \beta_\pi(\psi)\leq   \|\psi\|_\pi
\end{equation}
\end{lemma}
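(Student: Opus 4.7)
The easier half, $\beta_\pi(\psi) \leq \|\psi\|_\pi$, will follow by extending the decomposition \eqref{decompo} with the further complex-orthogonal splitting $p_\pi^{\,\prime}(\varphi)^\perp\psi = p_\pi(\varphi)p_\pi^{\,\prime}(\varphi)^\perp\psi + p_\pi(\varphi)^\perp p_\pi^{\,\prime}(\varphi)^\perp\psi$. Since $p_\pi(\varphi) \in N = \pi(M)^{\,\prime\prime}$ commutes with $p_\pi^{\,\prime}(\varphi) \in N^{\,\prime}$, the four summands of $\psi$ lie in pairwise (complex-)orthogonal subspaces, so Pythagoras yields $\|\psi\|_\pi^2 \geq \|\hat{\psi}_0\|_\pi^2 + \|p_\pi(\varphi)^\perp p_\pi^{\,\prime}(\varphi)^\perp\psi\|_\pi^2 = \beta_\pi(\psi)^2$.

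For the main inequality $\mathop{\mathrm{dil}}_0^B\gamma \leq \beta_\pi(\psi)$, I plan to invoke Theorem \ref{bas5}\,\eqref{bas51}, which gives $d_B(M|\nu_t,\nu) \leq \|u\varphi_t - \varphi\|_\pi$ for every $u \in {\mathcal U}(N^{\,\prime})$, and to test this with the one-parameter unitary group $u_t = \exp(tA)$ associated with an arbitrary antisymmetric $A \in N^{\,\prime}$ (so $A^* = -A$). The norm expansion $u_t = {\mathsf 1} + tA + O(t^2)$ combined with $\varphi_t = \varphi + t\psi + \mathbf{o}(t)$ gives
\[
u_t\varphi_t - \varphi = t(\psi + A\varphi) + \mathbf{o}(t),
\]
so passing to the $\limsup$ as $t\to 0$ yields $\mathop{\mathrm{dil}}_0^B\gamma \leq \|\psi + A\varphi\|_\pi$. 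Writing $A = \mathsf{i} B$ with $B \in N^{\,\prime}_{\mathsf h}$ and infimizing reduces the problem to
\[
\mathop{\mathrm{dil}}_0^B\gamma \leq \inf_{B \in N^{\,\prime}_{\mathsf h}}\|\psi + \mathsf{i} B\varphi\|_\pi = \mathrm{dist}_{\pi,\mathbb{R}}\bigl(\psi,\,\mathsf{i}[N^{\,\prime}_{\mathsf h}\varphi]\bigr).
\]

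The crux of the proof, and the main obstacle, is identifying this distance as $\beta_\pi(\psi)$. I would block-decompose a generic $B \in N^{\,\prime}_{\mathsf h}$ with respect to $p_\pi^{\,\prime}(\varphi)$: the diagonal $(1,1)$-block $p_\pi^{\,\prime}(\varphi)Bp_\pi^{\,\prime}(\varphi)$ stays hermitian and produces contributions to $B\varphi$ dense in $[p_\pi^{\,\prime}(\varphi)N^{\,\prime}_{\mathsf h}\varphi]$, while the off-diagonal block $B_{21} = p_\pi^{\,\prime}(\varphi)^\perp B p_\pi^{\,\prime}(\varphi)$ can be taken completely freely in $p_\pi^{\,\prime}(\varphi)^\perp N^{\,\prime} p_\pi^{\,\prime}(\varphi)$ (with the $(1,2)$-block determined as $B_{21}^*$), and the set of $B_{21}\varphi$ thus obtained is dense in $p_\pi(\varphi)p_\pi^{\,\prime}(\varphi)^\perp{\mathcal H}_\pi$. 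Consequently,
\[
\mathsf{i}[N^{\,\prime}_{\mathsf h}\varphi] = \mathsf{i}[p_\pi^{\,\prime}(\varphi)N^{\,\prime}_{\mathsf h}\varphi] + p_\pi(\varphi)p_\pi^{\,\prime}(\varphi)^\perp{\mathcal H}_\pi.
\]
Inspecting \eqref{decompo}, the components $\hat{\psi}_1$ and $p_\pi(\varphi)p_\pi^{\,\prime}(\varphi)^\perp\psi$ lie inside this subspace, whereas $\hat{\psi}_0$ is real-orthogonal to $\mathsf{i}[p_\pi^{\,\prime}(\varphi)N^{\,\prime}_{\mathsf h}\varphi]$ by Lemma \ref{ocomp} and complex-orthogonal to $p_\pi(\varphi)p_\pi^{\,\prime}(\varphi)^\perp{\mathcal H}_\pi$ (as it lies in $p_\pi^{\,\prime}(\varphi){\mathcal H}_\pi$), and $p_\pi(\varphi)^\perp p_\pi^{\,\prime}(\varphi)^\perp\psi$ is complex-orthogonal to all of $p_\pi(\varphi){\mathcal H}_\pi \supset \mathsf{i}[N^{\,\prime}_{\mathsf h}\varphi]$. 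Hence the orthogonal projection of $\psi$ onto the real complement of $\mathsf{i}[N^{\,\prime}_{\mathsf h}\varphi]$ equals $\hat{\psi}_0 + p_\pi(\varphi)^\perp p_\pi^{\,\prime}(\varphi)^\perp\psi$, of squared norm $\beta_\pi(\psi)^2$, closing the argument.
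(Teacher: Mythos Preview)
Your proof is correct and follows essentially the same route as the paper's: both start from Theorem~\ref{bas5}\,\eqref{bas51} with a one-parameter unitary group in $N^{\,\prime}$ to obtain $\mathop{\mathrm{dil}}_0^B\gamma \leq \inf_{\xi\in[N^{\,\prime}_{\mathsf h}\varphi]}\|\psi-\mathsf{i}\xi\|_\pi$, and both then identify this infimum as $\beta_\pi(\psi)$ by showing that $\hat{\psi}_1$ and $p_\pi(\varphi)p_\pi^{\,\prime}(\varphi)^\perp\psi$ belong to $\mathsf{i}[N^{\,\prime}_{\mathsf h}\varphi]$ while $\hat{\psi}_0$ and $p_\pi(\varphi)^\perp p_\pi^{\,\prime}(\varphi)^\perp\psi$ are real-orthogonal to it. The only cosmetic difference is that the paper builds an explicit approximating sequence $y_n=-\mathsf{i}\bigl(p_\pi^{\,\prime}(\varphi)^\perp c_n-c_n^*p_\pi^{\,\prime}(\varphi)^\perp\bigr)$ to exhibit $p_\pi(\varphi)p_\pi^{\,\prime}(\varphi)^\perp\psi\in\mathsf{i}[N^{\,\prime}_{\mathsf h}\varphi]$, whereas you argue more structurally via the free off-diagonal block $B_{21}$; both yield the same density statement $p_\pi^{\,\prime}(\varphi)^\perp[N^{\,\prime}_{\mathsf h}\varphi]\supset p_\pi(\varphi)p_\pi^{\,\prime}(\varphi)^\perp{\mathcal H}_\pi$.
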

\begin{proof}
For $z\in N_{\mathsf h}^{\,\prime}$ arbitrarily chosen but fixed, and $t\in {\mathbb R}$ let $V_t$ be defined as $V_t=\exp({\mathsf{i}}zt)$. Owing to $V_t\in {\mathcal U}(N^{\,\prime})$ and Theorem \ref{bas} \eqref{bas51} we then can estimate
\begin{equation}\label{upperneu1}
    d_B(M|\nu_{t},\nu)\leq \|\varphi_t-V_t\varphi\|_\pi
\end{equation}
Note that from $V_t\varphi=\varphi+{\mathsf{i}}t z\varphi+{\mathbf{o}}(t)$ and by condition \eqref{ifamily} one gets
\begin{equation*}
   \varphi_t-V_t\varphi=t(\psi-{\mathsf{i}} z\varphi)+{\mathbf{o}}(t)
\end{equation*}
From this $\lim_{t\to 0} \frac{\varphi_t-V_t\varphi}{t}=\psi-{\mathsf{i}} z\varphi$ is seen. Since $z\in N_{\mathsf h}^\prime$ could have been chosen at will, this together with \eqref{upperneu1} and by continuity yields
\begin{equation}\label{dilober}
   {\mathop{\mathrm{dil}}}_0^B \gamma= \limsup_{t\to 0}
\frac{d_B(M|\nu_{t},\nu)}{|t|}\leq \inf_{\xi\in [N_{\mathsf h}^\prime\varphi]} \|\psi-{\mathsf{i}}\xi\|_\pi\leq \|\psi\|_\pi
\end{equation}
We are going to calculate the infimum within \eqref{dilober}. We consider the decomposition
\begin{equation}\label{upperneu2}
\psi =\hat{\psi}_0+ \hat{\psi}_1  + p_\pi^{\,\prime}(\varphi)^\perp\psi
\end{equation}
of $\psi$ according to \eqref{decompo}. By its very definition the component $\hat{\psi}_0$ of $\psi$ then is the best $\langle\cdot,\cdot\rangle_{\pi,\mathbb R}$-approximation of $\psi\in ({\mathcal H}_\pi)_{\mathbb R}$ within $[N_{\mathsf h}\varphi]$, whereas $$\hat{\psi}_1\in {\mathsf{i}}[p_\pi^{\,\prime}(\varphi) N^{\,\prime}_{\mathsf h}\varphi]\subset {\mathsf{i}}[N_{\mathsf h}^\prime\varphi]$$
Occasionally note that with $\xi\in[N_{\mathsf h}^\prime\varphi]$ and in view of the specific of \eqref{upperneu2} one has
\begin{equation*}
    \psi-{\mathsf{i}}\xi=\hat{\psi}_0+ (\hat{\psi}_1 -{\mathsf{i}}p_\pi^{\,\prime}(\varphi)\xi)  + (p_\pi^{\,\prime}(\varphi)^\perp\psi-{\mathsf{i}}p_\pi^{\,\prime}(\varphi)^\perp\xi)
\end{equation*}
with the terms $\hat{\psi}_0$, $(\hat{\psi}_1 -{\mathsf{i}}p_\pi^{\,\prime}(\varphi)\xi)$ and  $ (p_\pi^{\,\prime}(\varphi)^\perp\psi-{\mathsf{i}}p_\pi^{\,\prime}(\varphi)^\perp\xi)$ to be mutually $\langle\cdot,\cdot\rangle_{\pi,\mathbb R}$-orthogonal. Thus one has
\begin{subequations}\label{infi10a}
\begin{equation}\label{infi1}
  \| \psi-{\mathsf{i}}\xi\|_\pi^2=\|\hat{\psi}_0\|_\pi^2+ \|\hat{\psi}_1 -{\mathsf{i}}p_\pi^{\,\prime}(\varphi)\xi\|_\pi^2  + \|p_\pi^{\,\prime}(\varphi)^\perp\psi-{\mathsf{i}}p_\pi^{\,\prime}(\varphi)^\perp\xi\|_\pi^2
  \end{equation}
Moreover, because of ${\mathsf{i}}p_\pi^{\,\prime}(\varphi)\xi\in[N^\prime\varphi]$, one has $p_\pi(\varphi){\mathsf{i}}p_\pi^{\,\prime}(\varphi)\xi={\mathsf{i}}p_\pi^{\,\prime}(\varphi)\xi$, and thus the last term on the right hand side of \eqref{infi1} decomposes as
\begin{multline}\label{infi10}
\|p_\pi^{\,\prime}(\varphi)^\perp\psi-{\mathsf{i}}p_\pi^{\,\prime}(\varphi)^\perp\xi\|_\pi^2 =\|p_\pi(\varphi)p_\pi^{\,\prime}(\varphi)^\perp\psi-{\mathsf{i}}p_\pi^{\,\prime}(\varphi)^\perp\xi\|_\pi^2\\
  + \|p_\pi(\varphi)^\perp p_\pi^{\,\prime}(\varphi)^\perp\psi\|_\pi^2
\end{multline}
\end{subequations}
Note that owing to $p_\pi(\varphi)\psi\in [N^\prime\varphi]$ there exists a sequence $\{c_n\}\subset N^\prime $ such that $p_\pi(\varphi)\psi=\lim_{n\to\infty} c_n\varphi$. From this
\begin{equation}\label{line4}
  p_\pi(\varphi)p_\pi^{\,\prime}(\varphi)^\perp\psi=\lim_{n\to\infty} p_\pi^{\,\prime}(\varphi)^\perp c_n\varphi
\end{equation}
follows. For each $n\in {{\mathbb{N}}}$, let us define elements $y_n\in N^\prime$ by
$$y_n=-{\mathsf{i}}\bigl(p_\pi^{\,\prime}(\varphi)^\perp c_n-c^*_n p_\pi^{\,\prime}(\varphi)^\perp\bigr)$$
Obviously $y_n^*=y_n$ and $y_n\varphi=-{\mathsf{i}}p_\pi^{\,\prime}(\varphi)^\perp c_n\varphi$ are fulfilled.
According to \eqref{line4} we therefore can be assured that
\begin{equation}\label{ypsi}
    p_\pi(\varphi)p_\pi^{\,\prime}(\varphi)^\perp\psi=\lim_{n\to\infty}{\mathsf{i}} y_n\varphi
\end{equation}
can be accomplished, with  a sequence $\{y_n\}\subset N^\prime_{\mathsf h}$. Hence,
$$p_\pi(\varphi)p_\pi^{\,\prime}(\varphi)^\perp\psi\in {\mathsf{i}}p_\pi^{\,\prime}(\varphi)^\perp[N^\prime_{\mathsf h}\varphi]\subset {\mathsf{i}}[N^\prime_{\mathsf h}\varphi]$$ Especially,  if $\hat{\xi}=-{\mathsf{i}}(\hat{\psi}_1+p_\pi(\varphi)p_\pi^{\,\prime}(\varphi)^\perp\psi)$ is chosen, then we have $\hat{\xi}\in [N_{\mathsf h}^\prime\varphi]$ with $$p_\pi^{\,\prime}(\varphi)\hat{\xi}=-{\mathsf{i}}\hat{\psi}_1,\
p_\pi^{\,\prime}(\varphi)^\perp\hat{\xi}=-{\mathsf{i}}p_\pi(\varphi)p_\pi^{\,\prime}(\varphi)^\perp\psi$$
That is, ${\mathsf{i}}p_\pi^{\,\prime}(\varphi)\hat{\xi}=\hat{\psi}_1$ and ${\mathsf{i}}p_\pi^{\,\prime}(\varphi)^\perp\hat{\xi}=p_\pi(\varphi)p_\pi^{\,\prime}(\varphi)^\perp\psi$ are fulfilled. In substituting \eqref{infi10} into \eqref{infi1} and choosing $\xi=\hat{\xi}$ in the resulting equation then yields that
\begin{equation*}
\| \psi-{\mathsf{i}}\hat{\xi}\|_\pi^2=\|\hat{\psi}_0\|_\pi^2+\|p_\pi(\varphi)^\perp p_\pi^{\,\prime}(\varphi)^\perp\psi\|_\pi^2\leq \| \psi-{\mathsf{i}}\xi\|_\pi^2
\end{equation*}
has to be fulfilled, for any $\xi\in[N_{\mathsf h}^\prime\varphi]$. What has been shown in view of \eqref{beta} is
\begin{equation}\label{min}
   \inf_{\xi\in [N_{\mathsf h}^\prime\varphi]} \|\psi-{\mathsf{i}}\xi\|_\pi=\beta_\pi(\psi)
\end{equation}
In view of \eqref{dilober} the latter provides the validity of the estimate \eqref{upperest}.
\end{proof}
Let $\gamma$ be a parameterized curve $\gamma$ passing through $\nu$, and be  $I_\pi\ni t\mapsto \varphi_t\in {\mathcal S}_{\pi,M}(\nu_t)$ a $C^1$-implementation of $\gamma$ around $\nu$. Then,
neither the latter, nor the  $\gamma$-compliant representation $\{\pi,{\mathcal H}_\pi\}$ with respect to which $(\varphi_t)$ is defined, will be uniquely determined. Accordingly, from each such context of an individual implementation around $\nu$ a  derivative $\psi=\varphi_t^{\,\prime}|_{t=0}\in {\mathcal H}_\pi$ arises. The latter object is referred to as `tangent vector at $\nu$' of the implementation in question. All the tangent vectors at $\nu$ for a given curve $\gamma$ that can occur in any such context will be collected into the class ${\mathsf{T}}_\nu(M|\gamma)$ of tangent vectors of $\gamma$ at $\nu$. The notation $$\psi\in{\mathsf{T}}_\nu(M|\gamma)$$ for a given fixed differentiable at $\nu$ curve $\gamma\subset {\mathcal S}(M)$ will mean that a  $\gamma$-compliant $^*$-representation $\{\pi,{\mathcal H}_\pi\}$ of $M$ exists such that $\gamma$ around $\nu$ can be implemented by a differentiable at $t=0$ family $I_\pi\ni t\mapsto \varphi_t\in {\mathcal S}_{\pi,M}(\nu_t)$ with derivative $\psi\in {\mathcal H}_\pi$. If we want to emphasize that context, a notation like $\psi=\psi(\pi,{\mathcal H}_\pi,(\varphi_t))$ will be chosen. Thereby, the class ${\mathsf{T}}_\nu(M|\gamma)$ obeys a remarkable invariance property. Namely,
\begin{equation}\label{invari}
\psi=\psi(\pi,{\mathcal H}_\pi,(\varphi_t))\in{\mathsf{T}}_\nu(M|\gamma)\ \Longrightarrow\ \psi-{\mathsf{i}}\xi\in{\mathsf{T}}_\nu(M|\gamma),\,\forall\xi\in \pi(M)_{\mathsf h}^{\,\prime}\varphi
\end{equation}
In fact, if $\xi=b\,\varphi$ with $b\in  \pi(M)_{\mathsf h}^{\,\prime}$, then $\tilde{\psi}=\psi-{\mathsf{i}}\xi$ obeys
$\tilde{\psi}=\tilde{\psi}(\pi,{\mathcal H}_\pi,(\tilde{\varphi}_t))$, with the implementation  $(\tilde{\varphi}_t)$  of $\gamma$ at $\nu$ given by $\tilde{\varphi}_t=\exp(-{\mathsf{i}}\,b\,t)\varphi_t$ at any $t\in I_\pi$. As a consequence of \eqref{invari} we then even get the following strengthening of  Lemma \ref{upper}.
\begin{corolla}\label{upperbest}
Let $\gamma\subset{\mathcal S}(M) $ be a parameterized curve  passing through $\nu$ and admitting $\gamma$-compliant representations around $\nu$. Then,
\begin{subequations}\label{invari10}
\begin{equation}\label{invari1}
    {\mathop{\mathrm{dil}}}_0^B \gamma\leq\inf_{\phi\in{\mathsf{T}}_\nu(M|\gamma)}\|\phi\|\leq \beta_\pi(\psi)\leq \|\psi\|_\pi
\end{equation}
for any $\psi=\psi(\pi,{\mathcal H}_\pi,(\varphi_t))\in {\mathsf{T}}_\nu(M|\gamma)$. The infimum within \eqref{invari1} is to be understood as shortcut notation of the expression
\begin{equation}\label{invari11}
\inf_{\phi\in{\mathsf{T}}_\nu(M|\gamma)}\|\phi\|:=\inf_{\phi(\pi,{\mathcal H}_\pi,(\varphi_t))\in{\mathsf{T}}_\nu(M|\gamma)}\|\phi(\pi,{\mathcal H}_\pi,(\varphi_t))\|_\pi
\end{equation}
\end{subequations}
Thereby, the occurring $\gamma$-compliant representation $\{\pi,{\mathcal H}_\pi\}$ around $\nu$ is the representation to which the tangent vector $\phi$ is referring while running through the elements of ${\mathsf{T}}_\nu(M|\gamma)$.
\end{corolla}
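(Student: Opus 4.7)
The plan is to break the asserted chain of inequalities into three pieces and to verify each from the preceding material, chiefly Lemma \ref{upper} and the invariance property \eqref{invari}.

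First I would dispose of the rightmost inequality $\beta_\pi(\psi)\leq \|\psi\|_\pi$ at once: it is either the trivial specialization $\xi=0$ in the formula \eqref{min} derived inside the proof of Lemma \ref{upper}, namely $\beta_\pi(\psi)=\inf_{\xi\in[N^{\,\prime}_{\mathsf h}\varphi]}\|\psi-{\mathsf{i}}\xi\|_\pi$, or equivalently a direct reading of Lemma \ref{upper} itself.

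Next I would tackle the middle estimate $\inf_{\phi\in{\mathsf T}_\nu(M|\gamma)}\|\phi\|\leq \beta_\pi(\psi)$ for the fixed $\psi=\psi(\pi,{\mathcal H}_\pi,(\varphi_t))$. The key tool is \eqref{invari}: for every hermitian $b\in \pi(M)^{\,\prime}_{\mathsf h}$, the vector $\psi-{\mathsf{i}}\,b\varphi$ again lies in ${\mathsf T}_\nu(M|\gamma)$ (it is the derivative of the re-implementation $\tilde{\varphi}_t=\exp(-{\mathsf{i}}bt)\varphi_t$ in the same representation). Hence
\[
\inf_{\phi\in{\mathsf T}_\nu(M|\gamma)}\|\phi\|\;\leq\; \inf_{b\in \pi(M)^{\,\prime}_{\mathsf h}}\|\psi-{\mathsf{i}}\,b\varphi\|_\pi.
\]
By continuity of $\xi\mapsto\|\psi-{\mathsf{i}}\xi\|_\pi$ and density of the linear subset $\pi(M)^{\,\prime}_{\mathsf h}\varphi$ inside its closure $[N^{\,\prime}_{\mathsf h}\varphi]$, the last infimum coincides with $\inf_{\xi\in[N^{\,\prime}_{\mathsf h}\varphi]}\|\psi-{\mathsf{i}}\xi\|_\pi$, which by \eqref{min} equals $\beta_\pi(\psi)$.

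Finally, for the leftmost inequality ${\mathop{\mathrm{dil}}}_0^B\gamma\leq \inf_{\phi\in{\mathsf T}_\nu(M|\gamma)}\|\phi\|$, I would argue pointwise over the infimum: pick any $\phi=\phi(\tilde\pi,{\mathcal H}_{\tilde\pi},(\tilde\varphi_t))\in{\mathsf T}_\nu(M|\gamma)$, apply Lemma \ref{upper} in the $\tilde\pi$-context with implementation $(\tilde\varphi_t)$, and read off ${\mathop{\mathrm{dil}}}_0^B\gamma\leq \beta_{\tilde\pi}(\phi)\leq \|\phi\|_{\tilde\pi}$. Since the left hand side does not depend on $\phi$, taking the infimum (in the sense made explicit in \eqref{invari11}) over all such triples $(\tilde\pi,{\mathcal H}_{\tilde\pi},(\tilde\varphi_t))$ delivers the desired bound.

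The only place where a subtlety must be handled is the passage from hermitian $b\in\pi(M)^{\,\prime}_{\mathsf h}$ in \eqref{invari} to the larger set $[N^{\,\prime}_{\mathsf h}\varphi]$ appearing in \eqref{min}; this is what the approximation argument above is for, and the uniform continuity of the norm makes it routine rather than delicate. Beyond that, the whole statement is essentially a bookkeeping combination of Lemma \ref{upper} with the re-implementation trick \eqref{invari}, and no new analytic input is required.
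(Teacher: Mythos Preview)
Your proof is correct and follows essentially the same route as the paper's own argument: both use the re-implementation trick \eqref{invari} plus continuity to pass to the closure $[N^{\,\prime}_{\mathsf h}\varphi]$ and invoke \eqref{min} for the middle estimate, and both apply Lemma \ref{upper} pointwise over ${\mathsf T}_\nu(M|\gamma)$ for the leftmost bound. The only difference is cosmetic ordering of the three inequalities.
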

\begin{proof}
In fact, if $\psi=\psi(\pi,{\mathcal H}_\pi,(\varphi_t))\in{\mathsf{T}}_\nu(M|\gamma)$, then owing to \eqref{invari} and due to continuity, with respect to the $\gamma$-compliant representation $\pi$ we obviously have that
\[
\inf_{\phi\in{\mathsf{T}}_\nu(M|\gamma)}\|\phi\|\leq \|\psi-\mathsf{i}\xi\|_\pi
\]
for each $\xi\in [\pi(M)_{\mathsf h}^{\,\prime}\varphi]$.
From this in view of \eqref{min} with $N=\pi(M)^{\,\prime\prime}$ we infer that
\[
\inf_{\phi\in{\mathsf{T}}_\nu(M|\gamma)}\|\phi\|\leq \inf_{\xi\in [\pi(M)_{\mathsf h}^{\,\prime}\varphi]}\|\psi-\mathsf{i}\xi\|_\pi=\beta_\pi(\psi)
\]
By Lemma \ref{upper},   ${\mathop{\mathrm{dil}}}_0^B\gamma\leq \|\phi\|_{\hat{\pi}}$, for any $\phi=\phi(\hat{\pi},\hat{{\mathcal H}}_\pi,({\hat{\varphi}}_t))\in {\mathsf{T}}_\nu(M|\gamma)$. Hence
\[
{\mathop{\mathrm{dil}}}_0^B\gamma\leq\inf_{\phi\in{\mathsf{T}}_\nu(M|\gamma)}\|\phi\|
\]
follows, see \eqref{invari11}.
The last two estimates together with  \eqref{upperest} give \eqref{invari1}.
\end{proof}
\subsection{Calculating the local dilation function}\label{ipathex0}
Next we are going to construct a class of curves through $\nu$ such that, for some $\psi=\psi(\pi,{\mathcal H}_\pi,(\varphi_t))\in {\mathsf{T}}_\nu(M|\gamma)$,
\begin{equation}\label{gleich}
     {\mathop{\mathrm{dil}}}_0^B \gamma=\inf_{\phi\in{\mathsf{T}}_\nu(M|\gamma)}\|\phi\|=\|\psi\|_\pi
\end{equation}
will be satisfied. That is, in \eqref{upperest} and \eqref{invari1} then equality occurs.
\subsubsection{The basic construct}\label{bascon}
To start with, let $\nu,\varrho\in {\mathcal S}(M)$ be fixed, with $\varrho\not=\nu$, and which is equivalent to $0\leq F(M|\varrho,\nu)< 1$.
Let $P=P(M|\varrho,\nu)$, $F=F(M|\varrho,\nu)$, and be $\{\pi,{\mathcal H}_\pi\}$ a unital $^*$-representation of $M$ chosen such that the $\pi$-fibres of both states exist. Let $N=\pi(M)^{\prime\prime}$. By Theorem \ref{positiv1} there are $\varphi\in {\mathcal S}_{\pi,M}(\nu)$ and
$\zeta\in {\mathcal S}_{\pi,M}(\varrho)$ with $d_B(M|\nu,\varrho)=\|\zeta-\varphi\|_\pi$ or equivalently, with $\langle \zeta,\varphi\rangle_\pi={F}$. By Theorem \ref{bas5} \eqref{bas53} this is equivalent with the positivity of $h^\pi_{\zeta,\varphi}$ over $N^\prime$. For $t\in {\mathbb R}$ with $|t|\leq 1$ consider vectors
\begin{subequations}\label{geo0}
\begin{equation}\label{geo0a}
    \varphi_t=t\,\zeta+\lambda(t)\,\varphi\text{, with }\lambda(t)=-t\,{F}+\sqrt{1-t^2(1-P)}
\end{equation}
By the aforementioned facts and \eqref{pcont.1aa} it is easily checked that the vectors $\varphi_t$ are of unit length. Let $\nu_t$ be the state implemented by
$\varphi_t$ over $M$ with the help of $\pi$, $\varphi_t\in {\mathcal S}_{\pi,M}(\nu_t)$. In the sequel let $\Gamma(M|\nu,\varrho)$ be the set of all differentiable curves $\gamma$
\begin{equation}\label{geo0b}
   \gamma:\ \bigl[-1,1\bigr]\ni t\longmapsto \nu_t
\end{equation}
\end{subequations}
arising along such a construction \eqref{geo0a} for given $\nu$ and $\varrho$, unital $^*$-representation $\{\pi,{\mathcal H}_\pi\}$ of $M$  and representing vectors $\varphi$  and $\zeta$, with the latter being chosen as to obey  $d_B(M|\nu,\varrho)=\|\zeta-\varphi\|_\pi$ with respect to $\pi$.  Obviously, $\varphi_t=\varphi+t\psi+{\mathbf{o}}(t)$ is fulfilled, with the derivative of $\varphi_t$ at $t=0$ given by
\begin{subequations}\label{geo10}
\begin{equation}\label{geo1}
    \psi=\zeta-{F}\,\varphi\in {\mathsf{T}}_\nu(M|\gamma)
\end{equation}
With this tangent vector $\psi$, the map $\gamma$ of \eqref{geo0b} at $\nu$ is seen to have tangent form
\begin{equation}\label{geo1b}
f(\cdot)=\nu_t^{\,\prime}|_{t=0}=\langle\pi(\cdot)\psi,\varphi\rangle_\pi+\langle\pi(\cdot)\varphi,\psi\rangle_\pi
\end{equation}
\end{subequations}
With the help of $f$ one easily finds the following explicite formula to hold:
\begin{equation}\label{geo0c}
\nu_t=t^2\varrho+(1-t^2)\nu+t\lambda(t)\,f
\end{equation}
In case of $f\not={\mathsf 0}$, by continuous differentiability of \eqref{geo0c}, Lemma \ref{constr} at $t=0$ can be applied and implies the map $\gamma$ to be injective around $\nu$. Hence, each map $\gamma\in \Gamma(M|\nu,\varrho)$ with $\nu_t^{\,\prime}|_{t=0}\not={\mathsf 0}$ then is a parameterized curve passing through $\nu$ in the sense of the definition, and by construction admits the  $\gamma$-compliant representation $\{\pi,{\mathcal H}_\pi\}$ around $\nu$. Also, if $f\not={\mathsf 0}$ is not a multiple of $(\varrho-\nu)$, then $\gamma$ will be injective as a whole. In fact,  assuming $\nu_t=\nu_s$, for $t\not=s$ with $s,t\in [-1,1]$, this according to \eqref{geo0c} is equivalent to
\[ {\mathsf 0}=(t^2-s^2)(\varrho-\nu)+(t\lambda(t)-s\lambda(s))f \]
By assumptions on $f$, from this $t\lambda(t)=s\lambda(s)$ follows and ${\mathsf 0}=(t^2-s^2)(\varrho-\nu)$ has to be fulfilled. Note that since $F<1$ is fulfilled, one has $\varrho\not=\nu$. From this and by the previous then $t^2=s^2$ has to be followed. That is, $s=- t$, with $t\in [-1,1]\backslash\{0\}$. But in view of the definition of $\lambda(t)$ in \eqref{geo0a} this contradicts to the fact that the solution of $t\lambda(t)=s\lambda(s)$ under the condition $s=-t$ with $t\not=0$
would read $$t=\pm \frac{1}{\sqrt{1-P}}$$ and which both definitely were not in $ [-1,1]\backslash\{0\}$. Thus,  $\gamma$ has to be injective as a whole, then. On the other hand, in view of \eqref{geo0c} the case $f={\mathsf 0}$ is occurring if, and only if,
\[
\nu_t=t^2\varrho+(1-t^2)\nu
\]
is fulfilled. Hence, in this case $\gamma$ is degenerated, for then $\nu_t=\nu_{-t}$ holds, and then $\gamma$ cannot be injective around $\nu$, and then $\gamma$ is not a parameterized curve passing through $\nu$ in the sense of the definition. Nevertheless, owing to $\varrho\not=\nu$ (which is a consequence of $F<1$), e.g.\,the restriction  $\gamma|[0,1]$ of $\gamma$ to the unit interval is injective as a whole, and thus is passing through any of its states $\nu_t$ with $t\in\, ]0,1[$, in the sense of the definition. Similar to the case with vanishing $f$, the cases where $f$ proves to be a non-vanishing real multiple of $(\varrho-\nu)$ will lead into situations where degeneracies can occur around a (finite) number of exceptional states of $\gamma$, the curve then is not passing through in the very sense of our definition (we omit the details of the analysis). But interestingly, as will be shown below, even in those cases the property of injectivity of $\gamma|[0,1]$ will persist to be fulfilled.
Before proving this, start with the following auxiliary fact which holds if $\nu_t$ from \eqref{geo0c} is considered in restriction to the unit interval.
\begin{lemma}\label{auxin}
Provided $\varrho\not=\nu$, then for each $t\in\, ]0,1[$ one has $\nu_t\not=\nu$ and $\nu_t\not=\varrho$.
\end{lemma}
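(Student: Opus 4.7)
The plan is to assume, for contradiction, that $\nu_t=\nu$ (respectively $\nu_t=\varrho$) for some $t\in\,]0,1[$ and to derive a violation of fidelity. The vector $\varphi_t=t\zeta+\lambda(t)\varphi$ implements $\nu_t$ within $\pi$, so $|\langle\chi,\varphi_t\rangle_\pi|$ furnishes a lower bound for $F(M|\omega,\nu_t)$ whenever $\chi\in\mathcal{S}_{\pi,M}(\omega)$, by Definition \ref{genprob.2}. I will apply this with $\chi=\zeta$ and $\omega=\varrho$ to attack the case $\nu_t=\nu$, and with $\chi=\varphi$ and $\omega=\nu$ to attack the case $\nu_t=\varrho$.

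The first step is a direct computation of the relevant inner products using $\langle\zeta,\varphi\rangle_\pi=F$, $\|\zeta\|_\pi=\|\varphi\|_\pi=1$:
\begin{equation*}
\langle\zeta,\varphi_t\rangle_\pi = t+\lambda(t)F,\qquad \langle\varphi,\varphi_t\rangle_\pi = tF+\lambda(t).
\end{equation*}
Both numbers are real and, since $\lambda(t)>0$ on $\,]0,1[$ (which follows from $\sqrt{1-t^2(1-P)}>t\sqrt{P}=tF$ for $|t|<1$), both are strictly positive. The second step is to verify the two numerical inequalities $t+\lambda(t)F>F$ and $tF+\lambda(t)>F$ throughout $\,]0,1[$. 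Under the running hypothesis $\varrho\not=\nu$, i.e.\ $P=F^2<1$, the second inequality reduces via $\lambda(t)+tF=\sqrt{1-t^2(1-P)}$ to $\sqrt{1-t^2(1-P)}>F$, which squares to $1-P>t^2(1-P)$, true for $t\in\,]0,1[$. The first inequality can be established by the same kind of manipulation: $t+\lambda(t)F-F=t(1-P)-F(1-\sqrt{1-t^2(1-P)})$, and using the identity $1-\sqrt{1-t^2(1-P)}=t^2(1-P)/(1+\sqrt{1-t^2(1-P)})$ (valid because $1-P>0$) reduces positivity to $1+\sqrt{1-t^2(1-P)}>Ft$, which is trivially true since the left side is at least $1$ while the right is strictly less than $1$.

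The third and final step is the fidelity contradiction. If $\nu_t=\nu$, then $F(M|\varrho,\nu_t)=F(M|\varrho,\nu)=F$, yet the bound above gives $F(M|\varrho,\nu_t)\geq t+\lambda(t)F>F$, a contradiction; likewise, if $\nu_t=\varrho$, then $F(M|\nu,\nu_t)=F(M|\nu,\varrho)=F$, but $F(M|\nu,\nu_t)\geq tF+\lambda(t)>F$, again a contradiction. Hence neither equality can occur for $t\in\,]0,1[$.

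The only non-routine point is the pair of scalar inequalities of the second step; everything else is either a definitional computation or a direct appeal to the definition of fidelity through implementing vectors. I expect the main obstacle to be organizing those inequalities cleanly, in particular keeping track of the hypothesis $P<1$ (which is exactly the standing assumption $\varrho\not=\nu$) to avoid trivial degenerations.
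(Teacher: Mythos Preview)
Your proof is correct and takes a genuinely different route from the paper. The paper assumes $\nu_t=\nu$ and uses the positivity of $h^\pi_{\varphi_t,\varphi}=t\,h^\pi_{\zeta,\varphi}+\lambda(t)\,h^\pi_{\varphi,\varphi}\geq 0$ on $\pi(M)^{\,\prime}$ (via Lemma~\ref{bas3}\,\eqref{bas3aa}) to compute $F(M|\nu_t,\nu)$ \emph{exactly} as $tF+\lambda(t)$, then equates this to $1$ and solves; likewise it solves $t+\lambda(t)F=1$ for the case $\nu_t=\varrho$. You, by contrast, compare $\nu_t$ to the \emph{other} endpoint state and only use the trivial lower bound $F(M|\omega,\nu_t)\geq|\langle\chi,\varphi_t\rangle_\pi|$ from Definition~\ref{genprob.2}, showing that this already exceeds $F$. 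Your argument is more elementary in that it bypasses the $h$-form machinery and Theorem~\ref{bas5}\,\eqref{bas53} entirely; the paper's argument is slightly more informative in that it identifies $t=0$ and $t=1$ as the unique solutions of explicit equations rather than merely ruling out $]0,1[$. Both reduce to the same scalar analysis of $\lambda(t)$, and your handling of the two inequalities $tF+\lambda(t)>F$ and $t+\lambda(t)F>F$ is clean and complete.
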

\begin{proof}
Note that $\varrho\not=\nu$ implies $F(M|\varrho,\nu)=F<1$. Also note that $[0,1]\ni t\longmapsto\lambda(t)$ is obeying $\lambda(t)^{\,\prime}<0$ and $\lambda(1)=0$. Thus $\lambda(t)$ is non-negative over $[0,1]$.

Assume $\nu_t=\nu$, for some $t\in [0,1]$. Then, since $t\geq 0$ and $\lambda(t)\geq 0$ hold, according to \eqref{geo0a} one infers that over $\pi(M)^{\,\prime\prime}$
\[
h^\pi_{\varphi_t,\varphi}=t\,h^\pi_{\zeta,\varphi}+\lambda(t)\,h^\pi_{\varphi,\varphi}\geq {\mathsf 0}
 \]
has to be fulfilled. Hence, by Lemma \ref{bas3}\,\eqref{bas3aa}, $$1=F(M|\nu_t,\nu)=\|h^\pi_{\varphi_t,\varphi}\|_1=h^\pi_{\varphi_t,\varphi}({\mathsf 1})=t\,F+\lambda(t)$$
By definition of $\lambda(t)$ the latter means $\sqrt{1-t^2(1-P)}=1$. Owing to $0\leq P<1$ the only solution $t$ of the latter equation in $[0,1]$ is $t=0$. Accordingly, for each $t\in\, ]0,1[$,  $\nu_t\not=\nu$.

Analogously, $\nu_t=\varrho$ for some $t\in [0,1]$ implies $F(M|\nu_t,\varrho)=1$. Hence, owing to
\[
h^\pi_{\varphi_t,\zeta}=t\,h^\pi_{\zeta,\zeta}+\lambda(t)\,h^\pi_{\varphi,\zeta}\geq {\mathsf 0}
 \]
and in view of  Lemma \ref{bas3}\,\eqref{bas3aa} the relation
$
1=t+\lambda(t)\,F
$
can be inferred to hold. With help of the relation $F^2=P$ and since $0\leq P<1$ holds, the above mentioned relation easily can be seen to be equivalent to $t^2-2\,t+1=0$. Hence, $t=1$ is the only solution within $[0,1]$. Accordingly, for each $t\in\, ]0,1[$,  $\nu_t\not=\varrho$.
\end{proof}
\begin{corolla}\label{auxin1}
Provided $\varrho\not=\nu$,  then $\gamma|[0,1]$ is injective, for each $\gamma\in \Gamma(M|\nu,\varrho)$.
\end{corolla}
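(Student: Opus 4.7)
The plan is to exploit the explicit form of the implementation $\varphi_t=t\zeta+\lambda(t)\varphi$ so as to derive a closed expression for $F(M|\nu_t,\nu)$ which turns out to be a strictly monotonic function of $t$ on $[0,1]$; injectivity of $\gamma|[0,1]$ will then follow at once without any case analysis concerning whether $f$ is, or is not, a real multiple of $(\varrho-\nu)$.

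First I would verify that $\lambda(t)\geq 0$ for each $t\in[0,1]$. Upon squaring the relation $\sqrt{1-t^2(1-P)}\geq tF$ this reduces, via $F^{\,2}=P$, to the elementary inequality $t^2\leq 1$, so the stated positivity of $\lambda(\cdot)$ holds precisely on the unit interval. Next, with $\zeta,\varphi$ chosen (as built into the construction of $\Gamma(M|\nu,\varrho)$ at \eqref{geo0a}) so that $h^\pi_{\zeta,\varphi}\geq{\mathsf 0}$ on $N^{\,\prime}=\pi(M)^{\,\prime}$, and noting that $h^\pi_{\varphi,\varphi}\geq{\mathsf 0}$ is trivial, the linear form
\begin{equation*}
h^\pi_{\varphi_t,\varphi}=t\,h^\pi_{\zeta,\varphi}+\lambda(t)\,h^\pi_{\varphi,\varphi}
\end{equation*}
is a non-negative combination of positive forms over $N^{\,\prime}$, and thus remains positive for every $t\in[0,1]$. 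By Lemma \ref{bas3}\,\eqref{bas3aa} this yields
\begin{equation*}
F(M|\nu_t,\nu)=\|h^\pi_{\varphi_t,\varphi}\|_1=h^\pi_{\varphi_t,\varphi}({\mathsf 1})=\langle\varphi_t,\varphi\rangle_\pi=tF+\lambda(t)=\sqrt{1-t^2(1-P)}\,.
\end{equation*}

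Now I would invoke the hypothesis $\varrho\not=\nu$: this forces $F<1$, hence $P=F^{\,2}<1$, so $1-P>0$, and therefore the map $[0,1]\ni t\mapsto\sqrt{1-t^2(1-P)}$ is strictly decreasing. Consequently, if $\nu_s=\nu_t$ for some $s,t\in[0,1]$, then $F(M|\nu_s,\nu)=F(M|\nu_t,\nu)$, and strict monotonicity forces $s=t$. This is the asserted injectivity of $\gamma|[0,1]$.

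No serious obstacle is anticipated. The only delicate point is the verification that $\lambda(t)\geq 0$ throughout $[0,1]$, which is what secures positivity of $h^\pi_{\varphi_t,\varphi}$ and hence the applicability of Lemma \ref{bas3}\,\eqref{bas3aa} along the entire parameter range; this is exactly the step which explains why the statement is formulated for the half-interval $[0,1]$ rather than the full parameter domain $[-1,1]$, since for negative $t$ the form $t\,h^\pi_{\zeta,\varphi}$ is no longer positive and the simple fidelity computation breaks down.
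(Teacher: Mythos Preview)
Your proof is correct and takes a genuinely different---and considerably more economical---route than the paper's. The paper establishes injectivity by case analysis: the preliminary text before the corollary disposes of the cases $f={\mathsf 0}$ and $f$ not a real multiple of $(\varrho-\nu)$, and the formal proof of the corollary then treats the remaining case $f=\mu\,(\varrho-\nu)$ with $\mu\not=0$ by rewriting $\nu_t=g(t)\,\varrho+(1-g(t))\,\nu$, invoking Lemma~\ref{auxin} to exclude the ranges $\mu<0$ and $\mu>2F$, and finally analyzing the critical points of $g$ on $[0,1]$ to show it is increasing when $0<\mu\leq 2F$. Your argument instead observes that the fidelity computation of Lemma~\ref{auxin} already yields $F(M|\nu_t,\nu)=\sqrt{1-t^2(1-P)}$ for every $t\in[0,1]$, and that this function is strictly monotone once $P<1$; injectivity of $\gamma|[0,1]$ is then immediate, uniformly over all cases and without touching the structure of~$f$. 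What your approach sacrifices is the incidental information (recorded by the paper) that certain values of the proportionality constant~$\mu$ are \emph{a priori} impossible; for the bare injectivity claim, however, your method is both shorter and more transparent.
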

\begin{proof}
The cases not yet covered by our preliminary considerations arise if $f=\mu\,(\varrho-\nu)$ is fulfilled, with $\mu\in {\mathbb R}\backslash\{0\}$. Thus, we put this case. In this case, \eqref{geo0c} is equivalent to
\begin{subequations}\label{geoc1}
\begin{equation}\label{geo0cc}
\nu_t=g(t)\,\varrho+(1-g(t))\,\nu
\end{equation}
with $g(t)=t^2+\mu\, t\lambda(t)$. At the bounds, $g(t)$ and $g^{\,\prime}(t)$ take the following values
\begin{equation}\label{geo0cd}
g(0)=0,\,g(1)=1,\,g^{\,\prime}(0)=\mu,\,g^{\,\prime}(1)=2-\frac{\mu}{F}
\end{equation}
\end{subequations}
Assume $\mu<0$. From \eqref{geo0cd} then $g^{\,\prime}(0)<0$ is seen, and thus $g(t)<0$ for all $t>0$ sufficiently small. Due to $g(1)>0$ there exists $t$ with $0<t<1$ and $g(t)=0$. According to \eqref{geo0cc}, $\nu_t=\nu$ had to be followed, for some $t\not=0$. This contradicts the facts stated by Lemma \ref{auxin}. Hence, $f=\mu\,(\varrho-\nu)$ with $\mu<0$ cannot occur.

Assume the case $\mu>2\,F$. Then, $g(1)=1$ and $g^{\,\prime}(1)<0$. Hence, $g(t)>1$ for all $t<1$ sufficiently close to one. Thus, owing to  $g(0)=0$, $t\in \,]0,1[$ obeying $g(t)=1$ has to exist. Hence, $\nu_t=\varrho$, for some $t\not=1$, by \eqref{geo0cc}. This contradicts   another of the facts stated by Lemma \ref{auxin}. Hence, the case $f=\mu\,(\varrho-\nu)$ with $\mu>2\,F$ cannot occur.

Also note that all $t\in [0,1]$ satisfying
\begin{equation}\label{geo0ce}
t^2=\frac{1}{2(1-P)}\,\biggl(1\pm \frac{|1-\mu\,F|	}{\sqrt{\mu^2-2\,\mu F+1}}\biggr)
\end{equation}
exactly correspond to the solutions of the  equation $g^{\,\prime}(t)=0$ there. Thus, in view of \eqref{geo0ce} and \eqref{geo0cd}, there can exist two critical points $t_\pm$ of $g(t)$ within $]0,1]$ at most, and which correspond to the $\pm$-signs  in formula \eqref{geo0ce}, respectively. An easy calculation is showing that $t_+^2\geq 1$ has to be fulfilled whenever $0< \mu \leq 2F$ is satisfied. Hence, if any, in this case $g(t)$ within $]0,1[$ can possess the one critical point $t=t_-$ at most. Thus, since in this case $$g^{\,\prime}(0)=\mu>0,\, g^{\,\prime}(1)=2-\frac{\mu}{F}\geq 0$$ holds (with `$=0$' occurring only for $\mu=2F$), and one critical point exists at most, the derivative of $g(t)$ within  $]0,1[$ cannot change signs. Thus, $g(t)$ as an increasing continuous function with boundary values $g(0)=0$ and $g(1)=1$ is mapping $[0,1]$ injectively onto itself. Hence, in view of \eqref{geo0cc} and since $\varrho\not=\nu$ holds, also in case of $f=\mu\,(\varrho-\nu)$ with $0< \mu \leq 2F$ the injectivity of $\gamma|[0,1]$ follows. Obviously, the situations with $\mu<0$, $\mu>2 F$ and $0<\mu\leq 2 F$ cover all cases of $\mu\in {\mathbb R}\backslash\{0\}$.
\end{proof}
By Corollary \ref{auxin1}, $\gamma|[0,1]$ for any $\gamma\in \Gamma(M|\nu,\varrho)$ can be considered to be a curve in the sense of our definition and which is passing through any of its states corresponding to parameter values $t\in\,]0,1[$. But
in either case of $\gamma$, the conditions of \eqref{ifam} are fulfilled locally and thus the estimate \eqref{upperest} can be applied formally, with respect to the $^*$-representation $\pi$.
\subsubsection{Processing basic examples}\label{ipathex00}
By the construction schema of \eqref{geo0} together with the results of section \ref{burdist} the main data of $\gamma\in \Gamma(M|\nu,\varrho)$ can be determined.
\begin{example}\label{ex2}
Let $\nu\in {\mathcal S}(M)$ be an arbitrarily given, but fixed state. Let $\varrho \in {\mathcal S}(M)$ be chosen with  $P=P(M|\varrho,\nu)$ obeying $0\leq P<1$. Let $\gamma\in \Gamma(M|\nu,\varrho)$, $\psi\in {\mathsf{T}}_\nu(M|\gamma)$, with
$
\psi=\psi(\pi,(\varphi_t),{\mathcal H}_\pi)
$
and $\pi,(\varphi_t),{\mathcal H}_\pi$ referring to the construction schema \eqref{geo0}.
Then, the estimate \eqref{upperest} can be strengthened as following
\begin{subequations}\label{geo2}
\begin{equation}\label{geo2a}
    {\mathop{\mathrm{dil}}}_0^B \gamma =\limsup_{t\to 0}\frac{d_B(M|\nu_t,\nu)}{|t|}=\|\psi\|_\pi=\sqrt{1-P}
\end{equation}
More generally, for $1\geq s> 0$, one has
\begin{equation}\label{geo2b}
    {\mathop{\mathrm{dil}}}_s^B \gamma =\lim_{t\to s}\phantom{sup}\frac{d_B(M|\nu_t,\nu_s)}{|t-s|}
    =\sqrt{\frac{1-P}{1-s^2(1-P)}}\,,
\end{equation}
\end{subequations}
that is, for $s>0$ the $\limsup$ is even a $\lim$.
\end{example}
\begin{proof} In the special case of $\gamma\in \Gamma(M|\nu,\varrho)$ the parameters $\|\psi\|_\pi$ and $ {\mathop{\mathrm{dil}}}_0^B \gamma$ can be calculated explicitly.
Due to \eqref{geo1} and since $\langle \zeta,\varphi\rangle_\pi={F}$ is fulfilled,
\begin{subequations}\label{geo100}
\begin{equation}\label{geo1a}
    \|\psi\|_\pi=\sqrt{1-P(M|\varrho,\nu)}
\end{equation}
is easily seen. As mentioned above,  over $N^\prime$ the positivity condition $h^\pi_{\zeta,\varphi}\geq 0$ holds. In order to evaluate the dilation function of $\gamma$, remark that
for $t\geq 0$ the function $\lambda(t)$ in formula \eqref{geo0a} can take only non-negative values. Hence, the above mentioned positivity of $h^\pi_{\zeta,\varphi}$ over $N^\prime$ implies $h^\pi_{\varphi_t,\varphi_s}$ to be positive   for $s,t\geq 0$ on $N^\prime$, too, i.e.
\begin{equation}\label{hpos1}
   h^\pi_{\varphi_t,\varphi_s}=ts\,h^\pi_{\zeta,\zeta}+ \bigl(t\lambda(s)+s\lambda(t)\bigr)h^\pi_{\zeta,\varphi}+\lambda(t)\lambda(s)\,h^\pi_{\varphi,\varphi}\geq 0
\end{equation}
Accordingly, by Theorem \ref{bas5}\,\eqref{bas53}, for $s=0$ especially one infers that
\begin{equation*}
    d_B(M|\nu_t,\nu)=\|\varphi_t-\varphi\|_\pi=\|\,t\,\psi+{\mathbf{o}}(t)\|_\pi
\end{equation*}
must be fulfilled, for $t\geq 0$. The conclusion from this is
\begin{equation}\label{geo1c}
    \lim_{t\to 0+} \frac{d_B(M|\nu_t,\nu)}{|t|}=\|\psi\|_\pi
\end{equation}
\end{subequations}
This in view of \eqref{invari1} gives \eqref{geo2a}. Now, let us consider $s$ with $1\geq s>0$. Then, according to  \eqref{hpos1} and Theorem \ref{bas5}\,\eqref{bas53} for all $t$ of some full neighborhood of the considered $s>0$ within $[-1,1]$ one obviously has
\begin{equation}\label{allg1}
\begin{split}
    d_B(M|\nu_t,\nu_s)^2& =\|\varphi_t-\varphi_s\|_\pi^2=\|(t-s)\zeta+\bigl(\lambda(t)-\lambda(s)\bigr)\varphi\|_\pi^2\\
    &=(t-s)^2+2(t-s)\bigl(\lambda(t)-\lambda(s)\bigr){F}+\bigl(\lambda(t)-\lambda(s)\bigr)^2
\end{split}
\end{equation}
From this and \eqref{geo0a} then the following relations are easily inferred to hold
\begin{subequations}\label{allg2}
\begin{eqnarray}\label{allg2a}
    \lim_{t\to s} \frac{d_B(M|\nu_t,\nu_s)}{|t-s|}& =& \sqrt{1+2\lambda^\prime(s){F}+\lambda^\prime(s)^2}\\
  \label{allg2b}
   \lambda^\prime(s)& =& -{F}-\frac{s(1-P)}{\sqrt{1-s^2(1-P)}}
\end{eqnarray}
\end{subequations}
By substituting \eqref{allg2b} into \eqref{allg2a} and suitably processing the expression under the square root in \eqref{allg2a}, after subsequent simplifications and by suitably rearranging the arising terms, this finally will give that
\begin{equation*}
     \lim_{t\to s} \frac{d_B(M|\nu_t,\nu_s)}{|t-s|}
    =\sqrt{\frac{1-P}{1-s^2(1-P)}}
\end{equation*}
This is \eqref{geo0b}.
\end{proof}
For a better understanding of the estimates occurring in Lemma \ref{upper} and Corollary \ref{upperbest}, and for later use,  consider now the following example of $\gamma\in \Gamma(M|\nu,\varrho)$.
\begin{example}\label{exupper}
Let $M$ be a unital ${\mathsf C}^*$-algebra admitting a non-factorial  $^*$-representation $\{\pi,{\mathcal H}_\pi\}$. Let $N=\pi(M)^{\prime\prime}$, and be $0<z<{\mathsf 1}$ a central orthoprojection of $N$. Let $\varphi\in z{\mathcal H}_\pi$ and $\phi\in  z^\perp{\mathcal H}_\pi$ be unit vectors, and $a\in N_+$ with  $\|a\varphi\|=1$. Define the unit vector $\zeta$ by $$\zeta=\frac{a\varphi+\phi}{\sqrt{2}}$$
Let $\nu$ and $\varrho$ be those states on $M$ which are implemented by $\varphi$ and $\zeta$ via $\pi$, respectively. Then, in terms of the notions used in Lemma \ref{upper} and Example \ref{ex2}, for $\gamma\in \Gamma(M|\nu,\varrho)$ constructed along with \eqref{geo0}, in addition to \eqref{geo2a} one has
\begin{equation}\label{geo3}
  \|\hat{\psi}_0\|_\pi<{\mathop{\mathrm{dil}}}_0^B \gamma =\|\psi\|_\pi
\end{equation}
\end{example}
\begin{proof}
Since $z\geq p_\pi^{\,\prime}(\varphi)$ and $z\phi=0$ are fulfilled, by definition of $\zeta$ one finds that \begin{equation}\label{gl}
p_\pi^{\,\prime}(\varphi)\zeta=\frac{a\varphi}{\sqrt{2}},\ p_\pi^{\,\prime}(\varphi)^\perp\zeta=\frac{\phi}{\sqrt{2}}\not=0
\end{equation}
Also, $z\in N\cap N^\prime$ implies that  $$h^\pi_{\zeta,\varphi}=\frac{1}{\sqrt{2}}\, h^\pi_{a\varphi,\varphi}=\frac{1}{\sqrt{2}}\, h^\pi_{\sqrt{a}\varphi,\sqrt{a}\varphi}$$ Hence, $h^\pi_{\zeta,\varphi}\geq 0$ on $N^\prime$. Thus $$F=F(M|\nu,\varrho)=\langle \zeta,\varphi\rangle_\pi=\frac{1}{\sqrt{2}}\,\langle a\varphi,\varphi\rangle_\pi$$ and then $\gamma\in \Gamma(M|\nu,\varrho)$ arising along with \eqref{geo0} may be considered.  By \eqref{geo1} this implementation of $\gamma$ admits the tangent vector $\psi=\zeta-{F}\varphi$. From this and \eqref{gl} then $$p_\pi^{\,\prime}(\varphi)\psi=\frac{1}{\sqrt{2}}\,\bigl(a-\langle a\varphi,\varphi\rangle_\pi\bigr)\varphi\in N_{\mathsf{h}}\varphi,\ p_\pi^{\,\prime}(\varphi)^\perp\psi\not=0$$
follow. Hence, $\hat{\psi}_0=p_\pi^{\,\prime}(\varphi)\psi$, and thus   $\|\hat{\psi}_0\|_\pi<\|\hat{\psi}_0+p_\pi^{\,\prime}(\varphi)^\perp \psi\|_\pi=\|\psi\|_\pi$.
Together with \eqref{geo2a} this is \eqref{geo3}.
\end{proof}
Conclude with some auxiliary results about the construction schema \eqref{geo0}. Relating notations, if orthoprojections $p,q$ of a $vN$-algebra $N$ are given, then by $p\vee q$ the least orthoprojection of $N$ with $p\vee q\geq p$ and $p\vee q\geq q$ is meant.
\begin{lemma}\label{porter}
    Let $\{\pi,{\mathcal H}_\pi\}$ be a unital $^*$-representation, $\zeta,\varphi\in {\mathcal H}_\pi$ with  $h_{\zeta,\varphi}^\pi\geq 0$ on $\pi(M)^{\,\prime}$. Let $\phi=r\,\zeta+s\,\varphi$ and  $\hat{\phi}=\hat{r}\,\zeta+\hat{s}\,\varphi$, with reals $r,\hat{r},s,\hat{s}>0$. Then, $h^\pi_{\phi,\hat{\phi}}\geq {\mathsf 0}$, with \begin{equation}\label{porter0}
    s(h^\pi_{\phi,\hat{\phi}})=p_\pi^{\,\prime}(\zeta)\vee p_\pi^{\,\prime}(\varphi)=p_\pi^{\,\prime}(\phi)=p_\pi^{\,\prime}(\hat{\phi})
    \end{equation} Moreover, $s(h^\pi_{\phi,\varphi})= p_\pi^{\,\prime}(\varphi)$ and $s(h^\pi_{\zeta,\phi})=p_\pi^{\,\prime}(\zeta)$ hold.
\end{lemma}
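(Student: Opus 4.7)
The plan is to exploit the bilinearity of $(\xi,\eta)\mapsto h^\pi_{\xi,\eta}$ and expand $h^\pi_{\phi,\hat\phi}$ as a manifestly positive combination of the three building blocks $h^\pi_{\zeta,\zeta}$, $h^\pi_{\zeta,\varphi}$, $h^\pi_{\varphi,\varphi}$, then read off both the positivity and the support formula. First I note that, being positive on $\pi(M)^{\,\prime}$, the form $h^\pi_{\zeta,\varphi}$ is hermitian; since $h^\pi_{\varphi,\zeta}(z)=\overline{h^\pi_{\zeta,\varphi}(z^*)}$ this gives $h^\pi_{\varphi,\zeta}=h^\pi_{\zeta,\varphi}\geq{\mathsf 0}$ as well. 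Expanding then yields, for every $z\in\pi(M)^{\,\prime}$,
\begin{equation*}
h^\pi_{\phi,\hat\phi}(z)=r\hat{r}\,h^\pi_{\zeta,\zeta}(z)+(r\hat{s}+s\hat{r})\,h^\pi_{\zeta,\varphi}(z)+s\hat{s}\,h^\pi_{\varphi,\varphi}(z),
\end{equation*}
a strictly positive linear combination of three positive normal forms on the $vN$-algebra $\pi(M)^{\,\prime}$ (the outer two being vector states, the middle one positive by hypothesis), whence immediately $h^\pi_{\phi,\hat\phi}\geq{\mathsf 0}$.

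For the support I would use the standard fact that for normal positive forms $f,g$ on a $vN$-algebra and reals $\alpha,\beta>0$ one has $s(\alpha f+\beta g)=s(f)\vee s(g)$ (the equivalence $(\alpha f+\beta g)(z^*z)=0\Leftrightarrow f(z^*z)=g(z^*z)=0$ for $\alpha,\beta>0$). Applied iteratively to the three-term expansion this gives $s(h^\pi_{\phi,\hat\phi})=s(h^\pi_{\zeta,\zeta})\vee s(h^\pi_{\zeta,\varphi})\vee s(h^\pi_{\varphi,\varphi})$. By formula \eqref{traeger} applied within $\pi(M)^{\,\prime}$ one has $s(h^\pi_{\zeta,\zeta})=p_\pi^{\,\prime}(\zeta)$ and $s(h^\pi_{\varphi,\varphi})=p_\pi^{\,\prime}(\varphi)$; moreover the identity $h^\pi_{\zeta,\varphi}(p_\pi^{\,\prime}(\varphi)^\perp)=\langle\zeta,p_\pi^{\,\prime}(\varphi)^\perp\varphi\rangle_\pi=0$, together with the symmetric one in $\zeta$, gives $s(h^\pi_{\zeta,\varphi})\leq p_\pi^{\,\prime}(\zeta)\wedge p_\pi^{\,\prime}(\varphi)$, a term therefore absorbed in the join. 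Consequently $s(h^\pi_{\phi,\hat\phi})=p_\pi^{\,\prime}(\zeta)\vee p_\pi^{\,\prime}(\varphi)$. Specialising to $\hat\phi=\phi$, and observing that $h^\pi_{\phi,\phi}$ is the vector state implemented by $\phi$ over $\pi(M)^{\,\prime}$, whose support is $p_\pi^{\,\prime}(\phi)$ by \eqref{traeger}, identifies $p_\pi^{\,\prime}(\phi)=p_\pi^{\,\prime}(\zeta)\vee p_\pi^{\,\prime}(\varphi)$; the same reasoning applied to $\hat\phi$ in place of $\phi$ also gives $p_\pi^{\,\prime}(\hat\phi)=p_\pi^{\,\prime}(\zeta)\vee p_\pi^{\,\prime}(\varphi)$.

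The two supplementary identities follow by the same method with two-term expansions: $h^\pi_{\phi,\varphi}=r\,h^\pi_{\zeta,\varphi}+s\,h^\pi_{\varphi,\varphi}$ is a positive combination of positive forms with support $s(h^\pi_{\zeta,\varphi})\vee p_\pi^{\,\prime}(\varphi)=p_\pi^{\,\prime}(\varphi)$, and $h^\pi_{\zeta,\phi}=r\,h^\pi_{\zeta,\zeta}+s\,h^\pi_{\zeta,\varphi}$ has support $p_\pi^{\,\prime}(\zeta)\vee s(h^\pi_{\zeta,\varphi})=p_\pi^{\,\prime}(\zeta)$. I do not expect any real obstacle here; the only delicate point is the absorption bound $s(h^\pi_{\zeta,\varphi})\leq p_\pi^{\,\prime}(\zeta)\wedge p_\pi^{\,\prime}(\varphi)$, which rests precisely on the positivity hypothesis (via hermiticity) and is what makes the mixed cross-term disappear in all the $\vee$-computations above.
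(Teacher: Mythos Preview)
Your proof is correct and follows essentially the same approach as the paper: both expand $h^\pi_{\phi,\hat\phi}$ bilinearly into the three positive pieces $h^\pi_{\zeta,\zeta}$, $h^\pi_{\zeta,\varphi}$, $h^\pi_{\varphi,\varphi}$ (using that positivity of $h^\pi_{\zeta,\varphi}$ forces $h^\pi_{\varphi,\zeta}=h^\pi_{\zeta,\varphi}$), and then compute the support from this decomposition. The only difference is in packaging: you invoke the general identity $s(\alpha f+\beta g)=s(f)\vee s(g)$ for normal positive forms and absorb the cross term via $s(h^\pi_{\zeta,\varphi})\leq p_\pi^{\,\prime}(\zeta)\wedge p_\pi^{\,\prime}(\varphi)$, whereas the paper proves the needed special case from scratch by evaluating the expansion at ${\mathsf 1}$ and at $p'=s(h^\pi_{\phi,\hat\phi})$ and comparing. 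Your route is a bit tidier; the paper's is more self-contained but carries exactly the same content.
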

\begin{proof}
Owing to $h_{\varphi,\zeta}^\pi=h_{\zeta,\varphi}^\pi\geq 0$ and by the strict positivity of all coefficients
\begin{equation}\label{porter1}
    h_{\phi,\hat{\phi}}^\pi=r \hat{r} h_{\zeta,\zeta}^\pi+(r\hat{s}+\hat{r} s)\, h_{\zeta,\varphi}^\pi+s \hat{s} h_{\varphi,\varphi}^\pi\geq {\mathsf 0}
\end{equation}
is fulfilled, where all $h^\pi$-functionals occuring are positive on $\pi(M)^{\,\prime}$, with $\|h_{\phi,\hat{\phi}}^\pi\|_1=\langle\phi,\hat{\phi}\rangle_\pi$, $\|h_{\varphi,\varphi}^\pi\|_1=\|\varphi\|_\pi^2$, $\|h_{\zeta,\zeta}^\pi\|_1=\|\zeta\|_\pi^2$ and $\|h_{\zeta,\phi}^\pi\|_1=\langle\zeta,\varphi\rangle_\pi\geq 0$.
Hence, considering \eqref{porter1} at the ${\mathsf 1}$-operator, we obtain
\begin{subequations}\label{porter2}
\begin{equation}\label{porter2a}
 \|h_{\phi,\hat{\phi}}^\pi\|_1=r \hat{r} \|h_{\zeta,\zeta}^\pi\|_1+(r\hat{s}+\hat{r} s)\,\|h_{\zeta,\phi}^\pi\|_1+s \hat{s}\|h_{\varphi,\varphi}^\pi\|_1 \end{equation}
Let
$p^{\,\prime}=s(h^\pi_{\phi,\hat{\phi}})\in \pi(M)^{\,\prime}$ be the support orthoprojection of $h_{\phi,\hat{\phi}}^\pi$ in $\pi(M)^{\,\prime}$. Hence,
$h_{\phi,\hat{\phi}}^\pi(p^{\,\prime})=\|h_{\phi,\hat{\phi}}^\pi\|_1$, and thus from \eqref{porter1}  the following is seen
\begin{equation}\label{porter2b}
    \|h_{\phi,\hat{\phi}}^\pi\|_1=r \hat{r} h_{\zeta,\zeta}^\pi(p^{\,\prime})+(r\hat{s}+\hat{r} s)\, h_{\zeta,\varphi}^\pi(p^{\,\prime})+s \hat{s} h_{\varphi,\varphi}^\pi(p^{\,\prime})
\end{equation}
\end{subequations}
with $0\leq h_{\zeta,\zeta}^\pi(p^{\,\prime})\leq \|h_{\zeta,\zeta}^\pi\|_1$, $ 0\leq h_{\zeta,\varphi}^\pi(p^{\,\prime})\leq \|h_{\zeta,\phi}^\pi\|_1$ and $0\leq h_{\varphi,\varphi}^\pi(p^{\,\prime})\leq \|h_{\varphi,\varphi}^\pi\|_1$.
Note that by assumption the coefficients appearing on the right hand side of \eqref{porter1} are strictly positive.
Hence, under these conditions, by comparing \eqref{porter2a} with \eqref{porter2b} we see that in the previous estimates equality in fact has to be enforced to hold, that is, $h_{\zeta,\varphi}^\pi(p^{\,\prime})= \|h_{\zeta,\phi}^\pi\|_1$, $ h_{\varphi,\varphi}^\pi(p^{\,\prime})=\|h_{\varphi,\varphi}^\pi\|_1$ and $h_{\zeta,\zeta}^\pi(p^{\,\prime})= \|h_{\zeta,\zeta}^\pi\|_1$ have to be fulfilled. This implies $p^{\,\prime}\geq p_\pi^{\,\prime}(\varphi)$ and $p^{\,\prime}\geq p_\pi^{\,\prime}(\zeta)$, for, $p_\pi^{\,\prime}(\varphi)$ and $p_\pi^{\,\prime}(\zeta)$ are the support orthoprojections of $h_{\varphi,\varphi}^\pi$ and $h_{\zeta,\zeta}^\pi$ in  $\pi(M)^{\,\prime}$.
Hence $$p^{\,\prime}\geq p_\pi^{\,\prime}(\zeta)\vee p_\pi^{\,\prime}(\varphi)$$ On the other hand, let $q\in \pi(M)^{\,\prime}$ be an orthoprojection, with $q\leq  (p_\pi^{\,\prime}(\zeta)\vee p_\pi^{\,\prime}(\varphi))^\perp$. Since \eqref{porter1} implies $h_{\phi,\hat{\phi}}^\pi(q)=0$,
$q\leq {p^{\,\prime}}^\perp$
 follows. Clearly, under the supposition $$p^{\,\prime}> p_\pi^{\,\prime}(\zeta)\vee p_\pi^{\,\prime}(\varphi)$$ this argumentation also could be applied to $q=p^{\,\prime}-p_\pi^{\,\prime}(\zeta)\vee p_\pi^{\,\prime}(\varphi)$, with the result $$p^{\,\prime}-p_\pi^{\,\prime}(\zeta)\vee p_\pi^{\,\prime}(\varphi)\leq {p^{\,\prime}}^\perp$$ which is a contradiction. Thus $$s(h^\pi_{\phi,\hat{\phi}})=p^{\,\prime}=p_\pi^{\,\prime}(\zeta)\vee p_\pi^{\,\prime}(\varphi)$$ has to be  true. Obviously, in the special case if $\phi=\hat{\phi}$ the latter remains true with  $$p_\pi^{\,\prime}(\zeta)\vee p_\pi^{\,\prime}(\varphi)=s(h^\pi_{\phi,\hat{\phi}})=p_\pi^{\,\prime}(\phi)=p_\pi^{\,\prime}(\hat{\phi})$$
Then, on combining both relations  \eqref{porter0} is obtained. The additional two relations can be followed by applying similar arguments (the obvious details are omitted).
\end{proof}
\begin{corolla}\label{porter3}
    Let $\gamma\in \Gamma(M|\nu,\varrho)$ be as in \textup{Example \ref{ex2}}, implemented by $[-1,1]\ni t\longmapsto \varphi_t\in {\mathcal S}_{\pi,M}(\nu_t)$ of \eqref{geo0a}. Then,  for  $t,u\in]0,1[$,  $h^\pi_{\varphi_t,\varphi_u}$, $h^\pi_{\varphi_t,\varphi}$ and $h^\pi_{\zeta,\varphi_u}$ are positive and
    \begin{subequations}\label{porter4}
    \begin{eqnarray}\label{porter4a}
    p_\pi^{\,\prime}(\varphi_t)& =& s(h^\pi_{\varphi_t,\varphi_u})=p_\pi^{\,\prime}(\varphi_u)= p_\pi^{\,\prime}(\zeta)\vee p_\pi^{\,\prime}(\varphi)  \\
    \label{porter4b} p_\pi^{\,\prime}(\varphi)& = & s(h^\pi_{\varphi_t,\varphi})  \\
    \label{porter4c} p_\pi^{\,\prime}(\zeta)& = & s(h^\pi_{\zeta,\varphi_u})
    \end{eqnarray}
    \end{subequations}
    \end{corolla}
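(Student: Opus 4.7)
The plan is to derive Corollary \ref{porter3} as an essentially immediate consequence of Lemma \ref{porter}, once the two standing hypotheses of that lemma — positivity of $h_{\zeta,\varphi}^\pi$ on $\pi(M)^{\,\prime}$ and strict positivity of all coefficients in the linear combinations involved — have been verified in the present context.

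For the first hypothesis, I would invoke the very definition of $\Gamma(M|\nu,\varrho)$ as given in subsection \ref{bascon}: by construction the pair $(\varphi,\zeta)$ is chosen so that $d_B(M|\nu,\varrho) = \|\zeta-\varphi\|_\pi$, which via Theorem \ref{bas5}\,\eqref{bas53} is exactly $h_{\zeta,\varphi}^\pi \geq {\mathsf 0}$ on $\pi(M)^{\,\prime}$. For the coefficient positivity, I would analyze the function $\lambda(t) = -tF + \sqrt{1-t^2(1-P)}$ appearing in \eqref{geo0a}. Using $F^2 = P$ together with $0 \leq P < 1$ (equivalent to $\varrho \neq \nu$), a direct computation gives $\lambda(0) = 1$, $\lambda(1) = -F + \sqrt{P} = 0$, and $\lambda$ is strictly decreasing on $[0,1]$; hence $\lambda(t) > 0$ for every $t \in [0,1)$. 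Thus for $t, u \in\, ]0,1[$ each of the four scalars $t$, $\lambda(t)$, $u$, $\lambda(u)$ is strictly positive, and the vectors $\varphi_t = t\,\zeta + \lambda(t)\,\varphi$, $\varphi_u = u\,\zeta + \lambda(u)\,\varphi$ are precisely of the form $\phi = r\,\zeta+s\,\varphi$, $\hat{\phi} = \hat{r}\,\zeta+\hat{s}\,\varphi$ to which Lemma \ref{porter} applies.

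Lemma \ref{porter} applied with these choices immediately gives $h^\pi_{\varphi_t,\varphi_u} \geq {\mathsf 0}$ and
\[
s(h^\pi_{\varphi_t,\varphi_u}) \;=\; p_\pi^{\,\prime}(\zeta) \vee p_\pi^{\,\prime}(\varphi) \;=\; p_\pi^{\,\prime}(\varphi_t) \;=\; p_\pi^{\,\prime}(\varphi_u),
\]
which is \eqref{porter4a}. The two remaining identities \eqref{porter4b} and \eqref{porter4c} are then obtained from the additional clauses of Lemma \ref{porter} — namely $s(h^\pi_{\phi,\varphi}) = p_\pi^{\,\prime}(\varphi)$ and $s(h^\pi_{\zeta,\phi}) = p_\pi^{\,\prime}(\zeta)$ — by specializing $\phi = \varphi_t$ and $\phi = \varphi_u$, respectively; the same coefficient-positivity check already carried out ensures these auxiliary assertions are legitimately invoked.

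Since the entire argument reduces to bookkeeping once $\lambda(t) > 0$ on $]0,1[$ is known, there is no serious obstacle. The only point demanding any care is the verification of $\lambda(t) > 0$ on the open unit interval — in particular noticing that the closed endpoint $t=1$ is excluded precisely because $\lambda(1) = 0$ would cause the Lemma \ref{porter} hypothesis of strict positivity to fail, and that this exclusion is reflected in the statement of Corollary \ref{porter3} by restricting $t,u$ to the open interval $]0,1[$.
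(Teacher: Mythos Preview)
Your proposal is correct and takes essentially the same approach as the paper: the paper's proof is the one-line ``Apply Lemma \ref{porter} with $r=t$, $s=\lambda(t)$, $\hat{r}=u$, $\hat{s}=\lambda(u)$,'' and you have simply spelled out the verification of the hypotheses (positivity of $h^\pi_{\zeta,\varphi}$ from the construction of $\Gamma(M|\nu,\varrho)$, and strict positivity of $\lambda(t)$ on $]0,1[$) that the paper leaves implicit. The positivity of $\lambda(t)$ on $[0,1)$ is in fact already noted in the paper in the proof of Lemma \ref{auxin}.
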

   \begin{proof}
    Apply Lemma \ref{porter} with $r=t$, $s=\lambda(t)$, $\hat{r}=u$, $\hat{s}=\lambda(u)$.
   \end{proof}
   \subsubsection{A formula for the local dilation function}\label{iline}
   We are ready now to derive an exact expression for the local dilation function of a parameterized curve
    $$\gamma: I\ni t\,\mapsto\,\nu_t\in
      {\mathcal S}(M)$$
   which is passing through a state $\nu$ at $t=0$  and is admitting a $\gamma$-compliant  unital $^*$-representation  $\{\pi,{\mathcal H}_\pi\}$ around $\nu$. In line with this, suppose
    $$I_\pi\ni t\mapsto \varphi_t\in {\mathcal S}_{\pi,M}(\nu_t)$$
   is a  differentiable at $t=0$ implementation of $\gamma$ around $\nu$. We are going to show that  Corollary \ref{upperbest} can be strengthened so as to equality occurring in formula \eqref{invari1}.
 \begin{theorem}\label{line0}
   If $\psi=\psi(\pi,{\mathcal H}_\pi,(\varphi_t))\in{\mathsf{T}}_\nu(M|\gamma)$ is the tangent vector  at $\nu$,  then
   \begin{equation}\label{line1}
       {\mathop{\mathrm{dil}}}_0^B \gamma=\inf_{\phi\in{\mathsf{T}}_\nu(M|\gamma)}\|\phi\|=\sqrt{\|\hat{\psi}_0\|_\pi^2+\|p_\pi(\varphi)^\perp p_\pi^{\,\prime}(\varphi)^\perp\psi\|_\pi^2}
   \end{equation}
   is fulfilled, with $\hat{\psi}_0$ being the best $\langle\cdot,\cdot\rangle_{\pi,{\mathbb R}}$-approximation of $\psi\in ({\mathcal H}_\pi)_{\mathbb R}$ in $[\pi(M)_{\mathsf{h}}\varphi]$.
   \end{theorem}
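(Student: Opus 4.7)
Corollary \ref{upperbest} already yields the inequality ${\mathop{\mathrm{dil}}}_0^B\gamma \leq \inf_{\phi \in \mathsf{T}_\nu(M|\gamma)} \|\phi\| \leq \beta_\pi(\psi)$, where by \eqref{beta} the quantity $\beta_\pi(\psi)$ is precisely the right hand side of \eqref{line1}. Equality through the full chain will follow once the reverse bound ${\mathop{\mathrm{dil}}}_0^B\gamma \geq \beta_\pi(\psi)$ is in place, so that is the real task. My plan is to stay inside the single $\gamma$-compliant representation $\pi$ and exploit Theorem \ref{bas5}\,\eqref{bas51}:
\[
d_B(M|\nu_t,\nu) = \inf_{u \in \mathcal{U}(\pi(M)^{\,\prime})} \|u\varphi_t - \varphi\|_\pi.
\]
For each small $t$ I select a unitary $u_t \in \mathcal{U}(\pi(M)^{\,\prime})$ realizing this infimum up to error $o(t)$.

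Because $d_B(M|\nu_t,\nu) = O(t)$ and $\|\varphi_t - \varphi\|_\pi = O(t)$, the triangle inequality forces $\|u_t\varphi - \varphi\|_\pi = O(t)$, so the vectors $\xi_t := (u_t\varphi - \varphi)/t$ form a bounded family in $[\pi(M)^{\,\prime}\varphi]$, and the unitarity identity $\|u_t\varphi\|_\pi = 1$ rewrites as $t\|\xi_t\|_\pi^2 = -2\mathrm{Re}\langle\xi_t,\varphi\rangle_\pi$, placing every weak cluster point of $\xi_t$ in $[\pi(M)^{\,\prime}\varphi] \cap [\varphi]_{\mathbb R}^\perp$. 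Inserting $\varphi_t = \varphi + t\psi + o(t)$ into $u_t\varphi_t - \varphi$ and using $\|u_t\| = 1$ (so that the cross terms involving the $o(t)$ remainder contribute only $o(t^2)$), a routine expansion gives
\[
d_B(M|\nu_t,\nu)^2 = t^2\|\xi_t + u_t\psi\|_\pi^2 + o(t^2).
\]
Passing to a subsequence $t_n \downarrow 0$ along which $u_{t_n} \rightharpoonup u^*$ weakly in the closed unit ball of $\pi(M)^{\,\prime}$ (so $u^*\varphi = \varphi$, since $u_{t_n}\varphi \to \varphi$ in norm) and $\xi_{t_n} \rightharpoonup \xi^*$, weak lower semicontinuity of the Hilbert norm delivers
\[
\liminf_{t \to 0} \frac{d_B(M|\nu_t,\nu)^2}{t^2} \geq \|\xi^* + u^*\psi\|_\pi^2.
\]

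To close the argument I must show $\|\xi^* + u^*\psi\|_\pi \geq \beta_\pi(\psi)$. My plan is to decompose $\psi$ along the four $\langle\cdot,\cdot\rangle_{\pi,\mathbb R}$-orthogonal summands of \eqref{decompo} and Lemma \ref{ocomp}: the contraction $u^*$ fixes $\varphi$ and so acts as the identity on the $[N_{\mathsf h}\varphi]$-component carrying $\hat{\psi}_0$; the two $\mathsf{i}[N^{\,\prime}_{\mathsf h}\varphi]$-summands $\hat{\psi}_1$ and $p_\pi(\varphi)p_\pi^{\,\prime}(\varphi)^\perp\psi$ (the latter being of the form $\mathsf{i} y_n\varphi$ by \eqref{ypsi}) combine with $\xi^*$, via the density statement of Remark \ref{agg}\,\eqref{agg2}, into a single term $\mathsf{i}\xi$ with $\xi \in [\pi(M)^{\,\prime}_{\mathsf h}\varphi]$; and the genuinely invariant summand $p_\pi(\varphi)^\perp p_\pi^{\,\prime}(\varphi)^\perp\psi$ passes through untouched. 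The sought estimate then follows from the norm minimization \eqref{min} carried out in the proof of Lemma \ref{upper}. The hard part will be precisely this last structural identification: because $u^*$ is only a contraction and need not coincide with any $\exp(\mathsf{i}A)$ for $A \in \pi(M)^{\,\prime}_{\mathsf h}$, converting the weak limit into the parametrization of \eqref{min} requires a careful contraction-theoretic argument leveraging the four-fold decomposition of $\mathcal{H}_\pi$ relative to $\varphi$; a safer backup, in case this direct route stalls, is to compare $\gamma$ against the explicit geodesic curves $\tilde{\gamma}_t \in \Gamma(M|\nu,\nu_t)$ of Example \ref{ex2} via Corollary \ref{nuequiv}, importing the exact formula \eqref{geo2a} to supply the required lower bound.
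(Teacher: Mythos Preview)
Your main approach has a genuine gap at exactly the point you flag as ``the hard part.'' The weak lower semicontinuity of the norm gives only
\[
\liminf_n\|\xi_{t_n}+u_{t_n}\psi\|_\pi\ \geq\ \|\xi^*+u^*\psi\|_\pi,
\]
and nothing forces the right side to reach $\beta_\pi(\psi)$. One can show (using $u^*\varphi=\varphi$ together with $\|u^*\|\le 1$) that $u^*$ is block-diagonal with respect to $p_\pi'(\varphi)$ and equals the identity on $[\pi(M)\varphi]$; but on $p_\pi'(\varphi)^\perp\mathcal H_\pi$ it is merely a contraction in the commutant, and there is no mechanism preventing it from strictly shrinking the component $p_\pi(\varphi)^\perp p_\pi'(\varphi)^\perp\psi$. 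In that case $\|\xi^*+u^*\psi\|_\pi<\beta_\pi(\psi)$ is perfectly consistent with everything you have written. Note also that you are implicitly trying to bound the \emph{liminf} from below by $\beta_\pi(\psi)$, whereas the theorem only asserts this for the \emph{limsup}; the paper nowhere claims the two coincide in general, so your route may be aiming at a statement that is not available.

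Your backup sketch is not the paper's argument either. Invoking $\tilde\gamma_t\in\Gamma(M|\nu,\nu_t)$ gives a $t$-indexed \emph{family} of curves, but Corollary~\ref{nuequiv} compares two fixed parameterized curves, so it is unclear how the transfer is meant to work. The paper instead does something much more constructive: it first exploits the gauge freedom \eqref{invari} to replace $(\varphi_t)$ by $(\exp(-\mathsf{i}a_nt)\varphi_t)$ with $a_n\in\pi(M)'_{\mathsf h}$ chosen via \eqref{decompoapprox2} and \eqref{ypsi} so that the modified tangent vector $\psi^{(n)}$ approaches $\hat\psi_0+p_\pi(\varphi)^\perp p_\pi'(\varphi)^\perp\psi$. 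It then builds, for each $n$, a \emph{single} comparison curve $\tilde\gamma^{(n)}$ (formula \eqref{line9}) with tangent $\tilde\psi^{(n)}$ close to $\psi^{(n)}$, and observes that $\tilde\gamma^{(n)}$ is, after a linear reparameterization \eqref{repar}, a curve of the class $\Gamma(M|\nu,\varrho^{(n)})$ for a state $\varrho^{(n)}$ built so that $h^\pi_{\zeta_n,\varphi}\ge 0$ (see \eqref{line12}). For such curves Example~\ref{ex2} gives the dilation \emph{exactly}, namely ${\mathop{\mathrm{dil}}}_0^B\tilde\gamma^{(n)}=\|\tilde\psi^{(n)}\|_\pi$. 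Corollary~\ref{nuequiv} (now legitimately comparing two fixed curves) then yields $|{\mathop{\mathrm{dil}}}_0^B\gamma-\|\tilde\psi^{(n)}\|_\pi|\le\|\psi^{(n)}-\tilde\psi^{(n)}\|_\pi\to 0$, and $\|\tilde\psi^{(n)}\|_\pi\to\beta_\pi(\psi)$. The essential move you are missing is this explicit construction of a nearby curve whose dilation is known exactly---that is what replaces the lossy weak-limit step.
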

   \begin{proof}
   Throughout let $N=\pi(M)^{\,\prime\prime}$. Also, remind that according to \eqref{beta} the expression on the right hand side of formula \eqref{line1} will be abbreviated as $\beta_\pi(\psi)$.  In the special case where $\beta_\pi(\psi)=0$ is fulfilled the validity of \eqref{line1} is an immediate consequence  of Lemma \ref{upper}. Thus, we will suppose  $\beta_\pi(\psi)\not=0$. It is easy to infer from \eqref{beta} that then
   \begin{equation}\label{beta0}
    \hat{\psi}_0+p_\pi(\varphi)^\perp p_\pi^{\,\prime}(\varphi)^\perp\psi\not=0
   \end{equation}
   According to \eqref{decompo} the tangent vector $\psi$ uniquely decomposes as
   $\psi=\hat{\psi}_0+\hat{\psi}_1+p_\pi^{\,\prime}(\varphi)^\perp\psi$, with $\hat{\psi}_0\in [N_{\mathsf{h}}\varphi]$ and $\hat{\psi}_1\in {\mathsf{i}}p_\pi^{\,\prime}(\varphi)[N^\prime_{\mathsf{h}}\varphi]$, with $\varphi=\varphi_0$. Let sequences $\{x_n\}\subset N_{\mathsf{h}}$,
   $\{z_n\}\subset N^\prime_{\mathsf{h}}$, and  $\{y_n\}\subset N^\prime_{\mathsf{h}}$  be chosen in accordance with \eqref{decompoapprox1}, \eqref{decompoapprox2} and \eqref{ypsi}, i.e. such that
   \begin{subequations}\label{lines}
   \begin{eqnarray}\label{line2}
   \hat{\psi}_0 & = \lim_{n\to\infty} x_n\varphi\\ \label{line3}
    \hat{\psi}_1 & = \lim_{n\to\infty}{\mathsf{i}} z_n\varphi\\ \label{line2a}
    p_\pi(\varphi)p_\pi^{\,\prime}(\varphi)^\perp\psi& =\lim_{n\to\infty} {\mathsf{i}} y_n\varphi
   \end{eqnarray}
   \end{subequations}
   Also note that the following two identities are fulfilled:
   \begin{equation}\label{zerl}
   \begin{array}{rcrcrcr}
     p_\pi(\varphi)\psi & = & p_\pi(\varphi)\hat{\psi}_0&+ & \hat{\psi}_1&+& p_\pi(\varphi)p_\pi^{\,\prime}(\varphi)^\perp\psi \\
     p_\pi(\varphi)^\perp\psi &= &  p_\pi(\varphi)^\perp\hat{\psi}_0 & & &+& p_\pi(\varphi)^\perp p_\pi^{\,\prime}(\varphi)^\perp\psi
   \end{array}
   \end{equation}
   Let us consider the sequence $\{a_n\}\subset N^\prime_{\mathsf{h}}$  defined by $a_n=y_n+z_n$ for all subscripts. By \eqref{line2a} and \eqref{line3} we have
   \begin{equation}\label{line2b}
     \hat{\psi}_1 +  p_\pi(\varphi)p_\pi^{\,\prime}(\varphi)^\perp\psi=\lim_{n\to\infty} {\mathsf{i}} a_n\varphi
   \end{equation}
   Also, for each $n\in {\mathbb{N}}$, in defining at any $t\in I_\pi$ implementing vectors $\varphi_t^{(n)}$ of $\nu_t$ by
   \begin{equation}\label{line5}
     \varphi_t^{(n)}=\exp(-{\mathsf{i}}\,a_n t)\,\varphi_t\in {\mathcal S}_{\pi,M}(\nu_t)
   \end{equation}
   local implementations $(\varphi_t^{(n)})$ of $\gamma$ are given.
   Let $\psi^{(n)}=\psi\bigl(\pi,{\mathcal H}_\pi,(\varphi_t^{(n)})\bigr)\in{\mathsf{T}}_\nu(M|\gamma)$ be the tangent vector of this implementation. By \eqref{line5} then $\psi^{(n)}=\psi-{\mathsf{i}}\,a_n\varphi$, and thus from  \eqref{zerl} one infers that
   \begin{equation}\label{line6}
      \psi^{(n)}=p_\pi(\varphi)\hat{\psi}_0+\bigl(\hat{\psi}_1 +  p_\pi(\varphi)p_\pi^{\,\prime}(\varphi)^\perp\psi-{\mathsf{i}}\,a_n\varphi\bigr)+p_\pi(\varphi)^\perp \bigl(\hat{\psi}_0+p_\pi^{\,\prime}(\varphi)^\perp\psi\bigr)
   \end{equation}
   Then, if  another sequence of vectors $\tilde{\psi}^{(n)}$ is defined by the setting
   \begin{equation}\label{line7}
      \tilde{\psi}^{(n)}=p_\pi(\varphi)x_n\varphi+p_\pi(\varphi)^\perp \bigl(\hat{\psi}_0+p_\pi^{\,\prime}(\varphi)^\perp\psi\bigr)
   \end{equation}
   by \eqref{line2} together with \eqref{line6} and \eqref{line2b} it is easily seen that
   \begin{equation}\label{line8}
       \lim_{n\to\infty} \bigl(\psi^{(n)}-\tilde{\psi}^{(n)}\bigr)=0
   \end{equation}
   Note that $\langle \tilde{\psi}^{(n)},\varphi\rangle_\pi=0$ is fulfilled, for each $n$.
   In all cases where  $p_\pi(\varphi)$ is not a minimal orthoprojection of $N$, by Lemma \ref{auxx} the $x_n$ figuring in \eqref{line2} can be chosen to satisfy $p_\pi(\varphi)x_n\varphi\not=0$, for each $n\in {{\mathbb{N}}}$. In this case
   then $\tilde{\psi}^{(n)}\not=0$ is guaranteed by \eqref{line7}. Hence, the vectors $\tilde{\varphi}_t$ defined by
   \begin{equation}\label{line9}
      \tilde{\varphi}_t^{(n)}=t\,\tilde{\psi}^{(n)}+\sqrt{1-t^2\|\tilde{\psi}^{(n)}\|^2}\,\varphi, \ |t|\leq \|\tilde{\psi}^{(n)}\|^{-1},
   \end{equation}
   are unit vectors, and if $J_n=\bigl\{t\in {\mathbb R}: |t|\leq \|\tilde{\psi}^{(n)}\|^{-1}\bigr\}$ is defined, by
   \begin{subequations}\label{def}
    \begin{equation}\label{def1}
    J_n\ni t\longmapsto\tilde{\varphi}_t^{(n)}\in {\mathcal S}_{\pi,M}(\tilde{\nu}_t^{(n)})
   \end{equation}
    a local implementation of some parameterized curve
    \begin{equation}\label{def2}
    \tilde{\gamma}^{(n)}: J_n\ni t\longmapsto \tilde{\nu}_t^{(n)}\in {\mathcal S}(M)
    \end{equation}
   \end{subequations}
    at the state $ \tilde{\nu}_0^{(n)}=\nu$ is given.
   Note that by Definition \ref{budi} one has
   \begin{equation*}
   \begin{split}
       d_B\bigl(M|\nu_t,\tilde{\nu}_t^{(n)}\bigr)& \leq \| \varphi_t^{(n)}-\tilde{\varphi}_t^{(n)}\|_\pi= \| (\varphi+t \psi^{(n)}+{\mathbf o}(t))-(\varphi+t\tilde{\psi}^{(n)}+{\mathbf o}(t))\|_\pi\\
       &\leq  \| t( \psi^{(n)}-\tilde{\psi}^{(n)})+{\mathbf o}(t)\|_\pi
   \end{split}
   \end{equation*}
   Obviously, from this the following relation is obtained:
   \begin{equation}\label{line10}
      \limsup_{t\to 0} \frac{d_B\bigl(M|\nu_t,\tilde{\nu}_t^{(n)}\bigr)}{|t|}\leq \| \psi^{(n)}-\tilde{\psi}^{(n)}\|_\pi
   \end{equation}
   Now, due to \eqref{invari1}, ${\mathop{\mathrm{dil}}}_0^B \gamma<\infty$ holds. Hence, owing to \eqref{line10} an application of Corollary \ref{nuequiv} with $\varepsilon=\| \psi^{(n)}-\tilde{\psi}^{(n)}\|_\pi$ will show that
   \begin{equation*}
       |\,{\mathop{\mathrm{dil}}}_0^B \gamma-{\mathop{\mathrm{dil}}}_0^B \tilde{\gamma}^{(n)}|\leq \| \psi^{(n)}-\tilde{\psi}^{(n)}\|_\pi
   \end{equation*}
   for each $n$. Owing to \eqref{line8}, by taking the limit $\lim_{n\to\infty}$ from the latter
   \begin{equation}\label{line11}
      {\mathop{\mathrm{dil}}}_0^B \gamma=\lim_{n\to\infty} {\mathop{\mathrm{dil}}}_0^B \tilde{\gamma}^{(n)}
   \end{equation}
   is obtained. We are going to calculate the value of ${\mathop{\mathrm{dil}}}_0^B \tilde{\gamma}^{(n)}$, now. Thus, let $n\in {{\mathbb{N}}}$ be fixed but arbitrarily chosen. With the $x_n$ figuring in \eqref{line2}, we fix a real $\alpha_n$ in accordance with $\alpha_n\geq \|p_\pi(\varphi)x_np_\pi(\varphi)\|$. Note that by Lemma \ref{auxx} in the case at hand owing to $\langle x_n\varphi,\varphi\rangle_\pi=0$ and $p_\pi(\varphi)x_n\varphi\not=0$ we can be assured that for each $n\in {{\mathbb{N}}}$ a unit vector $\zeta_n\in{\mathcal H}_\pi$ and a state $\varrho^{(n)}\in {\mathcal S}(M)$ can be defined by the setting
   \begin{equation}\label{line12}
   \zeta_n=\frac{\tilde{\psi}^{(n)}+\alpha_n\varphi}{\|\tilde{\psi}^{(n)}+\alpha_n\varphi\|_\pi}\in {\mathcal S}_{\pi,M}(\varrho^{(n)})
   \end{equation}
   Note that $\tilde{\psi}^{(n)}+\alpha_n\varphi=b_n\varphi$ holds, with $b_n=p_\pi(\varphi)x_n p_\pi(\varphi)+\alpha_n p_\pi(\varphi)\in N_{\mathsf h}$. By definition of $\alpha_n$, $b_n$ is positive semi-definite. But then, we can be assured that $$h^\pi_{\zeta_n,\varphi}=h^\pi_{\sqrt{b_n}\varphi,\sqrt{b_n}\varphi}\geq 0$$ is fulfilled on $N^\prime$. From this and since $\langle \tilde{\psi}^{(n)},\varphi\rangle_\pi=0$ one easily infers that
   \begin{equation}\label{line13}
   {F}={F\bigl(M|\varrho^{(n)},\nu\bigr)}=\langle \zeta_n,\varphi\rangle_\pi=\frac{\alpha_n}{\|\tilde{\psi}^{(n)}+\alpha_n\varphi\|_\pi}=\frac{\alpha_n}{\sqrt{\|\tilde{\psi}^{(n)}\|_\pi^2+\alpha_n^2}}
   \end{equation}
  holds. Hence, in line with Example \ref{ex2} a parameterized curve $\gamma^{(n)}\in \Gamma\bigl(M|\nu,\varrho^{(n)}\bigr)$, $$\gamma^{(n)}: [-1,1]\ni t\longmapsto \nu^{(n)}_t$$ can be considered which arises from the implementation given by equations \eqref{geo0} for $\zeta=\zeta_n$.  A view on \eqref{line9}/\eqref{def} is showing that the parameterized curve $\tilde{\gamma}^{(n)}$ around $t=0$ is obtained from  $\gamma^{(n)}\in \Gamma\bigl(M|\nu,\varrho^{(n)}\bigr)$ by a simple re-parametrization:
   \begin{equation}\label{repar}
       \tilde{\nu}^{(n)}_t=\nu^{(n)}_{t\sqrt{\|\tilde{\psi}^{(n)}\|_\pi^2+\alpha_n^2}}
   \end{equation}
   Application of Example \ref{ex2}, \eqref{geo2a}, with $\gamma=\gamma^{(n)}$ in view of \eqref{line13} then implies
   \begin{equation*}
    {\mathop{\mathrm{dil}}}_0^B \gamma^{(n)} =\limsup_{t\to 0}\frac{d_B(M|\nu^{(n)}_t,\nu)}{|t|}=\sqrt{1-P}=\frac{\|\tilde{\psi}^{(n)}\|_\pi}{\sqrt{\|\tilde{\psi}^{(n)}\|_\pi^2+\alpha_n^2}}
   \end{equation*}
   Hence, in making use of  \eqref{repar} and the latter we can conclude as follows:
   \begin{equation*}
       \begin{split}
       {\mathop{\mathrm{dil}}}_0^B \tilde{\gamma}^{(n)}& =\limsup_{t\to 0}\frac{d_B\Bigl(M\Big|\nu^{(n)}_{t\sqrt{\|\tilde{\psi}^{(n)}\|_\pi^2+\alpha_n^2}},\nu\Bigr)}{|t|}\\
       &=\Bigl(\sqrt{\|\tilde{\psi}^{(n)}\|_\pi^2+\alpha_n^2}\Bigr)\limsup_{t\to 0}\frac{d_B\bigl(M\Big|\nu^{(n)}_{t\sqrt{\|\tilde{\psi}^{(n)}\|_\pi^2+\alpha_n^2}},\nu\bigr)}{|t|\sqrt{\|\tilde{\psi}^{(n)}\|_\pi^2+\alpha_n^2}}
       \\
       &=\bigl(\sqrt{\|\tilde{\psi}^{(n)}\|_\pi^2+\alpha_n^2}\bigr)\frac{\|\tilde{\psi}^{(n)}\|_\pi}{\sqrt{\|\tilde{\psi}^{(n)}\|_\pi^2+\alpha_n^2}}  \\
       & =\|\tilde{\psi}^{(n)}\|_\pi
        \end{split}
   \end{equation*}
   That is, we have been arriving at the following relation:
   \begin{equation}\label{line14}
       \begin{split}
       {\mathop{\mathrm{dil}}}_0^B \tilde{\gamma}^{(n)}=\|\tilde{\psi}^{(n)}\|_\pi
        \end{split}
   \end{equation}
   Clearly, since $n$ could have been chosen at will, \eqref{line14} holds for any $n\in {{\mathbb{N}}}$.
   Thus, the limit for $n\to\infty$ can be considered. By \eqref{line2} and \eqref{line7} we may follow that
   \begin{equation}\label{line15}
   \lim_{n\to\infty} \tilde{\psi}^{(n)}=p_\pi(\varphi)\hat{\psi}_0+p_\pi(\varphi)^\perp \bigl(\hat{\psi}_0+p_\pi^{\,\prime}(\varphi)^\perp\psi\bigr)=\hat{\psi}_0+p_\pi(\varphi)^\perp p_\pi^{\,\prime}(\varphi)^\perp\psi
   \end{equation}
   must hold. In accordance with \eqref{line11} and \eqref{line14} from  \eqref{line15} the relation
   \begin{equation*}
     {\mathop{\mathrm{dil}}}_0^B \gamma= \lim_{n\to\infty} {\mathop{\mathrm{dil}}}_0^B \tilde{\gamma}^{(n)}= \|\hat{\psi}_0+ p_\pi(\varphi)^\perp p_\pi^{\,\prime}(\varphi)^\perp\psi\|_\pi
   \end{equation*}
   follows. Finally, since owing to $ p_\pi^{\,\prime}(\varphi)\hat{\psi}_0=\hat{\psi}_0$ the vector $\hat{\psi}_0$ has to be
   orthogonal to $p_\pi^{\,\prime}(\varphi)^\perp p_\pi(\varphi)^\perp \psi= p_\pi(\varphi)^\perp p_\pi^{\,\prime}(\varphi)^\perp\psi$, from the previous $${\mathop{\mathrm{dil}}}_0^B \gamma=\sqrt{\|\hat{\psi}_0\|_\pi^2+ \|p_\pi(\varphi)^\perp p_\pi^{\,\prime}(\varphi)^\perp\psi\|_\pi^2}=\beta_\pi(\psi)$$ is seen. Thus, in view of Corollary \ref{upperbest} the validity of \eqref{line1} is obtained in the special case if $p_\pi(\varphi)$ is not a minimal orthoprojection of $N$.

   Finally, let us consider now the case with $p_\pi(\varphi)$ being  a minimal orthoprojection of $N$. Then, by Lemma \ref{auxx}, $p_\pi(\varphi)^\perp \hat{\psi}_0= \hat{\psi}_0$ holds. In this case, instead of \eqref{line6}
   \begin{equation}\label{line6add}
      \psi^{(n)}=\bigl(\hat{\psi}_1 +  p_\pi(\varphi)p_\pi^{\,\prime}(\varphi)^\perp\psi-{\mathsf{i}}\,a_n\varphi\bigr)+p_\pi(\varphi)^\perp \bigl(\hat{\psi}_0+p_\pi^{\,\prime}(\varphi)^\perp\psi\bigr)
   \end{equation}
   is fulfilled, and then with $\tilde{\psi}=\hat{\psi}_0+p_\pi(\varphi)^\perp p_\pi^{\,\prime}(\varphi)^\perp\psi\not=0$ instead of \eqref{line7} one has
    \begin{equation}\label{line7add}
      \lim_{n\to\infty}\bigl(\psi^{(n)}-\tilde{\psi}\bigr)=0
   \end{equation}
   for $p_\pi(\varphi)^\perp \bigl(\hat{\psi}_0+p_\pi^{\,\prime}(\varphi)^\perp\psi\bigr)=\hat{\psi}_0+p_\pi(\varphi)^\perp p_\pi^{\,\prime}(\varphi)^\perp\psi$ is fulfilled.
   Note that $ \tilde{\psi}\not=0$ and $\langle \tilde{\psi},\varphi\rangle_\pi=0$ hold. Hence, those vectors $\tilde{\varphi}_t$ which are defined by
   \begin{equation}\label{line9add}
      \tilde{\varphi}_t=t\,\tilde{\psi}+\sqrt{1-t^2\|\tilde{\psi}\|_\pi^2}\,\varphi, \ |t|\leq \|\tilde{\psi}\|_\pi^{-1},
   \end{equation}
   are unit vectors, and if $J=\{t\in {\mathbb R}: |t|\leq \|\tilde{\psi}\|_\pi^{-1}\}$ is defined, by
   \begin{subequations}\label{defadd}
    \begin{equation}\label{def1add}
    J\ni t\longmapsto\tilde{\varphi}_t\in {\mathcal S}_{\pi,M}(\tilde{\nu}_t)
   \end{equation}
    a local implementation of some parameterized curve
    \begin{equation}\label{def2add}
    \tilde{\gamma}: J\ni t\longmapsto \tilde{\nu}_t\in {\mathcal S}(M)
    \end{equation}
   \end{subequations}
    at the state $ \tilde{\nu}_0=\nu$ is given.
   Note that by Definition \ref{budi} one has
   \begin{equation*}
   \begin{split}
       d_B\bigl(M|\nu_t,\tilde{\nu}_t\bigr)& \leq \| \varphi_t^{(n)}-\tilde{\varphi}_t\|_\pi= \| (\varphi+t \psi^{(n)}+{\mathbf o}(t))-(\varphi+t\tilde{\psi}+{\mathbf o}(t))\|_\pi\\
       &\leq  \| t( \psi^{(n)}-\tilde{\psi})+{\mathbf o}(t)\|_\pi
   \end{split}
   \end{equation*}
   Obviously, from this the following relation is obtained:
   \begin{equation}\label{line10add}
      \limsup_{t\to 0} \frac{d_B(M|\nu_t,\tilde{\nu}_t)}{|t|}\leq \| \psi^{(n)}-\tilde{\psi}\|_\pi
   \end{equation}
   Now, due to \eqref{invari1}, ${\mathop{\mathrm{dil}}}_0^B \gamma<\infty$ holds. Hence, owing to \eqref{line10add} an application of Corollary \ref{nuequiv} with $\varepsilon=\| \psi^{(n)}-\tilde{\psi}\|_\pi$ will show that
   \begin{equation*}
       |\,{\mathop{\mathrm{dil}}}_0^B \gamma-{\mathop{\mathrm{dil}}}_0^B \tilde{\gamma}|\leq \| \psi^{(n)}-\tilde{\psi}\|_\pi
   \end{equation*}
   In taking the limit for $n\to\infty$ of the latter yields
   \begin{equation}\label{line11add}
      {\mathop{\mathrm{dil}}}_0^B \gamma={\mathop{\mathrm{dil}}}_0^B \tilde{\gamma}
   \end{equation}
  We are going to calculate ${\mathop{\mathrm{dil}}}_0^B \tilde{\gamma}$ now. Note first that $h_{\tilde{\psi},\varphi}^\pi=0$ on $N^{\,\prime}$, by definition of $\tilde{\psi}$. Hence, by definition in \eqref{line9add} also $$h_{\tilde{\varphi}_t,\varphi}^\pi=t^2 h_{\tilde{\psi},\tilde{\psi}}^\pi+\bigl(1-t^2\|\tilde{\psi}\|_\pi^2\bigr)h_{\varphi,\varphi}^\pi\geq 0$$ on $N^{\,\prime}$. Due to  Theorem \ref{bas5}\,\eqref{bas53} and since $\tilde{\varphi}=\varphi+t\tilde{\psi}+{\mathbf o}(t)$ holds, this yields $$d_B(M|\tilde{\nu}_t,\nu)=\|\tilde{\varphi}_t-\varphi\|_\pi=\| t\tilde{\psi}+{\mathbf o}(t)\|_\pi$$
   From this we can conclude as follows:
   \[
   \begin{split}
   {\mathop{\mathrm{dil}}}_0^B \tilde{\gamma}& =\limsup_{t\to 0} \frac{d_B(M|\tilde{\nu}_t,\nu)}{|t|}=\| \tilde{\psi}\|_\pi=
   \|\hat{\psi}_0+p_\pi(\varphi)^\perp p_\pi^{\,\prime}(\varphi)^\perp\psi\|_\pi\\ &=\sqrt{\|\hat{\psi}_0\|_\pi^2+\|p_\pi(\varphi)^\perp p_\pi^{\,\prime}(\varphi)^\perp\psi\|_\pi^2}
   \end{split}
   \]
   Thus, by \eqref{line11add}, if $p_\pi(\varphi)\in N$ is minimal, then the assertion remains true, too.
   \end{proof}
   \begin{remark}\label{analog}
       \begin{enumerate}
         \item\label{analog1}
         On the one hand, by Theorem \ref{line0}, the dilation function
         \begin{equation}\label{global}
          {\mathop{\mathrm{dil}}}_0^B \gamma=\inf_{\phi\in{\mathsf{T}}_\nu(M|\gamma)}\|\phi\|
         \end{equation}
           is represented in terms of the class ${\mathsf{T}}_\nu(M|\gamma)$ of all possible tangent vectors coming along with the local differentiable implementations of $\gamma$ at $\nu$.
         \item\label{analog2}
         On the other hand, the invariant $\beta=\beta_\pi(\psi)$ can be calculated locally by
         \begin{equation}\label{local}
          {\mathop{\mathrm{dil}}}_0^B \gamma=\sqrt{\|\hat{\psi}_0\|_\pi^2+\|p_\pi(\varphi)^\perp p_\pi^{\,\prime}(\varphi)^\perp\psi\|_\pi^2}=\beta_\pi(\psi)
         \end{equation}
         within any $\gamma$-compliant $^*$-representation $\{\pi,{\mathcal H}_\pi\}$ and corresponding local implementations of $\gamma$ around $\nu$.
         Thereby, note that the best $\langle\cdot,\cdot\rangle_{\pi,\mathbb R}$-approximation  $\hat{\psi}_0$ of $\psi=\psi(\pi,{\mathcal H}_\pi,(\varphi_t))$ within $[\pi(M)^{\prime\prime}_{\mathsf{h}}\varphi]=[\pi(M)_{\mathsf{h}}\varphi]$ depends from the $\gamma$-compliant $^*$-representation $\{\pi,{\mathcal H}_\pi\}$ as well as from the  differentiable local implementation $(\varphi_t)$ chosen, that is, one has that $\hat{\psi}_0=\hat{\psi}_0(\pi,{\mathcal H}_\pi,(\varphi_t))$ as well.
       \end{enumerate}
   \end{remark}
\newpage
\section{The length structure of the Bures geometry}\label{lstruc}
\subsection{Geodesic arcs}\label{lstrucarcs}
In restricting the curves of $\Gamma(M|\nu,\varrho)$ to the part extending from $\nu$ to $\varrho$ we will arrive at another important class of implementable curves.
\begin{definition}\label{allg2c0}
For states $\nu\not=\varrho$, consider $\gamma\in \Gamma(M|\nu,\varrho)$ parameterized in accordance with \eqref{geo0}. Each parameterized curve $\gamma^{\,\prime}=\gamma|[0,1]$ obtained from such $\gamma$ by restricting parameters to the unit interval  $[0,1]$  is called  `geodesic arc' (extending from $\nu$ to $\varrho$). ${\mathcal{C}}^{\nu,\varrho}(M)$ be the set of all geodesic arcs extending from $\nu$ to $\varrho$.
\end{definition}
\begin{remark}\label{auxin2}
Note that according to Corollary \ref{auxin1} each geodesic arc $\gamma\in {\mathcal{C}}^{\nu,\varrho}(M)$ is injective as a map and therefore defines a simple curve. When required this fact in the sequel often tacitly and without explicit reference will be made use of.
\end{remark}
\begin{lemma}\label{geoarclength}
Let $\nu,\varrho\in {\mathcal S}(M)$. The Bures length of each $\gamma\in {\mathcal{C}}^{\nu,\varrho}(M)$ is
\begin{equation}\label{gal}
     \varTheta[\gamma]=\arcsin\sqrt{1-P(M|\nu,\varrho)}=\arccos{F(M|\nu,\varrho)}
\end{equation}
Thus, all geodesic arcs extending from $\nu$ to $\varrho$ have the same Bures length.
\end{lemma}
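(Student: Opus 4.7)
The plan is to apply Rinow's theorem (Lemma \ref{rinow0}) to express $\varTheta[\gamma]$ as the Lebesgue integral of the local dilation function over $[0,1]$, and then to evaluate this integral explicitly. By construction each $\gamma\in\mathcal{C}^{\nu,\varrho}(M)$ arises as the restriction to $[0,1]$ of some parameterized curve produced by \eqref{geo0}, so the local dilation function along $\gamma$ has already been determined in Example \ref{ex2}: equation \eqref{geo2b} gives
\[
\mathop{\mathrm{dil}}_s^B\gamma=\sqrt{\frac{1-P}{1-s^2(1-P)}}\qquad(s\in(0,1]),
\]
and \eqref{geo2a} supplies the matching value $\sqrt{1-P}$ at $s=0$; note that this expression is intrinsic, depending only on $P=P(M|\nu,\varrho)$ and not on the choices of $\pi$, $\varphi$, $\zeta$, which makes the length an invariant of the pair $(\nu,\varrho)$ as claimed.

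Next I would compute the integral. The substitution $u=s\sqrt{1-P}$ turns it into a standard form,
\[
\int_0^1\sqrt{\frac{1-P}{1-s^2(1-P)}}\,ds=\int_0^{\sqrt{1-P}}\frac{du}{\sqrt{1-u^2}}=\arcsin\sqrt{1-P},
\]
and the identity $P=F^2$ from \eqref{pcont.1aa} immediately yields $\arcsin\sqrt{1-F^2}=\arccos F$ for $0\le F\le 1$, giving both expressions in \eqref{gal}.

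The main obstacle is the hypothesis of Lemma \ref{rinow0}, which demands boundedness of the dilation function on the parameter interval. When $P>0$ this is automatic, since $\mathop{\mathrm{dil}}_s^B\gamma\le\sqrt{(1-P)/P}$ for all $s\in[0,1]$, and Rinow's theorem applies at once. The delicate case is $P=0$ (equivalently $\nu\perp\varrho$, by Corollary \ref{subadd}\,\eqref{subadd2}), in which the dilation function blows up as $s\uparrow 1$ although the integral is still convergent. Here I would first apply Rinow's theorem on each subarc $\gamma|[0,1-\varepsilon]$, obtaining
\[
\varTheta[\gamma|[0,1-\varepsilon]]=\arcsin\bigl((1-\varepsilon)\sqrt{1-P}\,\bigr),
\]
and then pass to the limit $\varepsilon\downarrow 0$. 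The passage uses the general fact from metric curve theory (see the references in \S\ref{kurv}) that $\varTheta[\gamma]=\lim_{\varepsilon\downarrow 0}\varTheta[\gamma|[0,1-\varepsilon]]+\lim_{\varepsilon\downarrow 0}\varTheta[\gamma|[1-\varepsilon,1]]$, combined with the estimate $\varTheta[\gamma|[1-\varepsilon,1]]\ge d_B(M|\nu_{1-\varepsilon},\nu_1)$ coming from the definition of length and the reverse estimate via continuity of $t\mapsto\varphi_t$ in \eqref{geo0a} (which forces $d_B(M|\nu_{1-\varepsilon},\nu_1)\le\|\varphi_{1-\varepsilon}-\varphi_1\|_\pi\to 0$, so the end contribution vanishes). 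Together this completes the argument and shows the stated value is attained regardless of whether $P$ vanishes.
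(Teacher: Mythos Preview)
Your approach is essentially the paper's: apply Lemma~\ref{rinow0}, insert the dilation formula from Example~\ref{ex2}, and evaluate the resulting integral. The paper's proof does exactly this in one line and does not separately discuss the case $P=0$.

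Your extra care about $P=0$ is well motivated, since the dilation blows up at $s=1$ and the boundedness hypothesis of Lemma~\ref{rinow0} fails. However, the argument you give for the vanishing of the tail contribution does not close. You write down the \emph{lower} bound $\varTheta[\gamma|[1-\varepsilon,1]]\ge d_B(M|\nu_{1-\varepsilon},\nu_1)$ and then show $d_B\to 0$; from a lower bound tending to zero nothing follows about $\varTheta[\gamma|[1-\varepsilon,1]]$. What you need is an \emph{upper} bound. The cleanest fix is to bound the Bures length of $\gamma|[1-\varepsilon,1]$ from above by the Hilbert-space arc length of the implementation $t\mapsto\varphi_t$ on $[1-\varepsilon,1]$ (using $d_B(M|\nu_t,\nu_s)\le\|\varphi_t-\varphi_s\|_\pi$ on every partition); since $t\mapsto\varphi_t$ of \eqref{geo0a} is a $C^1$ curve on the unit sphere with $\|\varphi_t'\|_\pi=(1-t^2)^{-1/2}$ when $P=0$, its arc length over $[1-\varepsilon,1]$ is $\arcsin 1-\arcsin(1-\varepsilon)\to 0$. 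Alternatively, once you know the total Bures length is finite (which that same upper bound gives: $\varTheta[\gamma]\le\pi/2$), the continuity of the length function on a rectifiable curve yields $\varTheta[\gamma|[1-\varepsilon,1]]\to 0$ directly.
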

\begin{proof}
By Lemma \ref{rinow0} and Example \ref{ex2}, the Bures length $\varTheta[\gamma]$ of $\gamma\in {\mathcal{C}}^{\nu,\varrho}(M)$ is
 \begin{equation*}
 \varTheta[\gamma] =\int_0^1 {\mathop{\mathrm{dil}}}_s^B \gamma\ d\/s
= \int_0^1 \sqrt{\frac{1-P}{1-s^2(1-P)}}\ d\/s=\arcsin\sqrt{1-P},
\end{equation*}
with $P=P(M|\nu,\varrho)$.
\end{proof}
We remark that depending from the specifics of the two states $\nu$ and $\varrho$ a geodesic arc extending from $\nu$ to $\varrho$ might not be uniquely determined and then more than one curve might be in ${\mathcal{C}}^{\nu,\varrho}(M)$, see  Definition \ref{allg2c0}.  Uniqueness is describes as follows.
\begin{theorem}\label{einga}
 $\gamma\in {\mathcal{C}}^{\nu,\varrho}(M)$ is unique if, and only if, $\varrho_\nu$ is disjoint to $\nu_\varrho$.
\end{theorem}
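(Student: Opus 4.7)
The plan is to reduce uniqueness of $\gamma\in{\mathcal{C}}^{\nu,\varrho}(M)$ to uniqueness of its tangent form at $\nu$, and then to invoke Theorem~\ref{unique} together with Lemma~\ref{disj}. Since every $\gamma\in{\mathcal{C}}^{\nu,\varrho}(M)$ comes from the construction \eqref{geo0}, formula \eqref{geo0c} reads
\[
\nu_t = t^2\varrho + (1-t^2)\nu + t\lambda(t)\,f,\qquad t\in[0,1],
\]
where $\lambda(t) = -tF + \sqrt{1-t^2(1-P)}$ depends only on the scalars $P = P(M|\nu,\varrho)$ and $F = F(M|\nu,\varrho)$, and $f$ is the tangent form of $\gamma$ at $\nu$. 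Since $\nu\neq\varrho$ forces $F<1$, one has $t\lambda(t)>0$ on $(0,1)$, so inspecting $\nu_t$ at any such $t$ shows that two geodesic arcs in ${\mathcal{C}}^{\nu,\varrho}(M)$ coincide as maps $[0,1]\to{\mathcal S}(M)$ iff they carry the same tangent form $f$. Uniqueness of $\gamma$ is therefore equivalent to uniqueness of $f$ across all admissible implementing data $(\pi,\varphi,\zeta)$.

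Next I would rewrite $f$ in terms of objects from \S\ref{uniass}. For admissible data, $\varphi\in{\mathcal S}_{\pi,M}(\nu)$, $\zeta\in{\mathcal S}_{\pi,M}(\varrho)$ with $\langle\zeta,\varphi\rangle_\pi = F$, the tangent vector is $\psi = \zeta - F\varphi$ by \eqref{geo1}. Inserting this into \eqref{iimpdi1a} and using $(f^\pi_{\zeta,\varphi})^*(x) = \langle\pi(x)\varphi,\zeta\rangle_\pi$ gives
\[
f = f^\pi_{\zeta,\varphi} + (f^\pi_{\zeta,\varphi})^* - 2F\,\nu,
\]
with $f^\pi_{\zeta,\varphi}(x) = \langle\pi(x)\zeta,\varphi\rangle_\pi$. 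As observed after Definition~\ref{Fset}, every such $f^\pi_{\zeta,\varphi}$ lies in ${\mathfrak F}(M|\nu,\varrho)$. So the tangent form $f$ is uniquely determined iff the hermitian part $\re f^\pi_{\zeta,\varphi}$ is the same across all admissible $(\pi,\varphi,\zeta)$.

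The sufficient direction is then immediate: if $\varrho_\nu$ and $\nu_\varrho$ are mutually disjoint, Theorem~\ref{unique} says ${\mathfrak F}(M|\nu,\varrho) = \{f_0\}$ is a singleton, so $f^\pi_{\zeta,\varphi} = f_0$ for every admissible choice, hence the tangent form $f = 2\re f_0 - 2F\nu$ is unique and $\gamma$ is unique. For the necessary direction I would apply Lemma~\ref{disj}: if disjointness fails, then within any representation $\{\pi,{\mathcal H}_\pi\}$ where both $\pi$-fibres exist there exist two linear forms of the type $f^\pi_{\zeta,\varphi}$ (with the pairs $(\zeta,\varphi)$ realizing $F$, hence admissible for \eqref{geo0}) whose real parts differ. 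Feeding these two pairs into \eqref{geo0} produces two elements of ${\mathcal{C}}^{\nu,\varrho}(M)$ with distinct tangent forms at $\nu$, hence distinct as curves.

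The only point requiring extra care is the \emph{representation-independence} of the reduction: a priori, varying the ambient $^*$-representation could conceivably yield new geodesic arcs sharing one tangent form but still differing at some $\nu_t$. This is ruled out by \eqref{geo0c}, which expresses $\nu_t$ purely in terms of $\nu,\varrho,P$ and $f$, with no further reference to $\pi$, so the representation-theoretic data influences $\gamma$ only through $f$. Once this is noted, the theorem follows by combining Theorem~\ref{unique} with Lemma~\ref{disj} in the manner just described.
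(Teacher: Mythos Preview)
Your proof is correct and follows essentially the same approach as the paper's own proof: both reduce uniqueness of the geodesic arc to uniqueness of $\re f^\pi_{\zeta,\varphi}$ via the explicit formula for $\nu_t$, then invoke Theorem~\ref{unique} for sufficiency and Lemma~\ref{disj} for necessity. The paper writes $\nu_t = t^2\varrho + \lambda(t)^2\nu + 2t\lambda(t)\,\re f^\pi_{\zeta,\varphi}$ directly (its formula \eqref{einga1}) rather than passing through the tangent form $f$ and \eqref{geo0c}, but since $f = 2\re f^\pi_{\zeta,\varphi} - 2F\nu$ this is merely a cosmetic difference; your extra paragraph on representation-independence is already implicit in both formulas, since they express $\nu_t$ entirely in terms of $\nu$, $\varrho$, $F$, and $\re f^\pi_{\zeta,\varphi}$.
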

\begin{proof}
Each geodesic arc $\gamma\in {\mathcal{C}}^{\nu,\varrho}(M)$ can be  implemented in accordance with the parametrization of \eqref{geo0}. Hence, $\gamma: [0,1]\ni t\longmapsto\nu_t\in {\mathcal S}(M)$ is given by
\begin{equation}\label{einga1}
\nu_t=t^2\,\varrho+\lambda(t)^2\nu+ 2\,t\lambda(t)\re f^\pi_{\zeta,\varphi}
\end{equation}
for $\varphi\in {\mathcal S}_{\pi,M}(\nu)$, $\zeta\in  {\mathcal S}_{\pi,M}(\varrho)$  with    $F(M|\nu,\varrho)=\langle\zeta,\varphi\rangle_\pi$. By Theorem \ref{unique} and \eqref{einga1}, for disjoint $\varrho_\nu$ and $\nu_\varrho$ uniqueness follows, and Lemma \ref{disj} for the reverse.
\end{proof}
A particularly interesting special case is described in the following.
\begin{example}\label{geouni}
Let $\varrho\dashv \nu$, and be $\gamma: [0,1]\ni t\longmapsto \nu_t\in {\mathcal S}(M)$ a geodesic arc, $\gamma\in {\mathcal{C}}^{\nu,\varrho}(M)$. Up to the parametrization, $\gamma$ is unique, and $|\gamma|\subset \Omega^0_M(\nu)$ is fulfilled.
\end{example}
\begin{proof}
Suppose $\varrho\dashv \nu$. According to Theorem \ref{suppabsstet} and Corollary \ref{uniqueimp} then $\mu={\mathsf 0}$ is the only positive linear form obeying $\mu\leq \varrho$ and $\mu\perp\nu$. Hence, $\varrho_\nu={\mathsf 0}$, which is  disjoint to $\nu_\varrho$ in a trivial way. Application of Theorem \ref{einga} then gives uniqueness of $\gamma$.
 Now, let $[0,1]\ni t\longmapsto \varphi_t$ be the implementing family of $\gamma$, with $\varphi_t=t\,\zeta+\lambda(t)\,\varphi$, see \eqref{geo0a}. Then, since according to Theorem \ref{suppabsstet}\,\eqref{supp2} one has  $p_\pi(\zeta)=s(\varrho_\pi)\leq s(\nu_\pi)=p_\pi(\varphi)$, by construction of $\varphi_t$, $p_\pi(\varphi)\varphi_t=\varphi_t$ follows. Hence $s((\nu_t)_\pi)=p_\pi(\varphi_t)\leq p_\pi(\varphi)=s(\nu_\pi)$ has to hold, that is, $\nu_t\dashv \nu$, by Theorem \ref{suppabsstet}\,\eqref{supp1}--\eqref{supp2}. From this $|\gamma|\subset \Omega^0_M(\nu)$ follows.
\end{proof}
\begin{corolla}\label{geoconv}
Let $\varrho,\nu,\omega\in {\mathcal S}(M)$ with $\nu\not=\varrho$. Provided both $\varrho$ and $\nu$ belong to  $\Omega^0_M(\omega)$, then   $|\gamma|\subset \Omega^0_M(\omega)$, for  each $\gamma\in {\mathcal{C}}^{\nu,\varrho}(M)$.
\end{corolla}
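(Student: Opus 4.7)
The plan is to exploit the explicit implementation of a geodesic arc by vectors in some unital $^*$-representation and the Radon--Nikodym-type characterization of absolute continuity provided by Theorem \ref{suppabsstet}. The main point will be that absolute continuity $\dashv \omega$ translates into a support-dominance relation on implementing vectors that is stable under the taking of real-linear combinations, and hence is inherited by each intermediate state $\nu_t$ on a geodesic arc extending from $\nu$ to $\varrho$.

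Fix $\gamma\in \mathcal{C}^{\nu,\varrho}(M)$. By Definition \ref{allg2c0} there exist a unital $^*$-representation $\{\pi_0,\mathcal{H}_{\pi_0}\}$ and vectors $\varphi_0\in \mathcal{S}_{\pi_0,M}(\nu)$, $\zeta_0\in \mathcal{S}_{\pi_0,M}(\varrho)$ with $\langle\zeta_0,\varphi_0\rangle_{\pi_0}=F(M|\nu,\varrho)$ such that the arc is implemented by $\varphi_{0,t}=t\,\zeta_0+\lambda(t)\,\varphi_0$ for $t\in [0,1]$. First I would pass to the enlarged representation $\{\pi,\mathcal{H}_\pi\}=\{\pi_0\oplus\pi_\omega,\mathcal{H}_{\pi_0}\oplus\mathcal{H}_{\pi_\omega}\}$, where $\{\pi_\omega,\mathcal{H}_{\pi_\omega}\}$ is (for instance) the GNS representation of $\omega$, so that the $\pi$-fibres of all four states $\omega$, $\nu$, $\varrho$, $\nu_t$ exist. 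In this representation the arc $\gamma$ is still implemented by $\varphi_t=t\,\zeta+\lambda(t)\,\varphi$, with $\varphi=(\varphi_0,0)\in \mathcal{S}_{\pi,M}(\nu)$ and $\zeta=(\zeta_0,0)\in \mathcal{S}_{\pi,M}(\varrho)$; in particular $\varphi_t\in \mathcal{S}_{\pi,M}(\nu_t)$ for all $t\in [0,1]$.

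Next, I would invoke the equivalence \eqref{supp1}$\,\Leftrightarrow\,$\eqref{supp2} of Theorem \ref{suppabsstet} applied in this representation. The hypotheses $\nu\dashv\omega$ and $\varrho\dashv\omega$ give
\begin{equation*}
 s(\nu_\pi)\leq s(\omega_\pi)\quad\text{and}\quad s(\varrho_\pi)\leq s(\omega_\pi).
\end{equation*}
By \eqref{traeger} this is the same as $p_\pi(\varphi)\leq s(\omega_\pi)$ and $p_\pi(\zeta)\leq s(\omega_\pi)$, and hence $s(\omega_\pi)\varphi=\varphi$ and $s(\omega_\pi)\zeta=\zeta$. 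Linearity then yields
\begin{equation*}
s(\omega_\pi)\varphi_t=t\,s(\omega_\pi)\zeta+\lambda(t)\,s(\omega_\pi)\varphi=\varphi_t,\qquad t\in [0,1],
\end{equation*}
so that $p_\pi(\varphi_t)\leq s(\omega_\pi)$. Since $\varphi_t\in \mathcal{S}_{\pi,M}(\nu_t)$, another appeal to \eqref{traeger} gives $s((\nu_t)_\pi)=p_\pi(\varphi_t)\leq s(\omega_\pi)$, and the reverse direction \eqref{supp2}$\,\Rightarrow\,$\eqref{supp1} of Theorem \ref{suppabsstet} then yields $\nu_t\dashv\omega$, i.e.\ $\nu_t\in \Omega^0_M(\omega)$ for each $t\in[0,1]$, proving the claim.

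There is essentially no technical obstacle here; the only point worth stating carefully is that a geodesic arc, although a priori tied to the specific representation $\pi_0$ in which it is constructed, may be transported without change into any representation that dominates $\pi_0$, so that the single representation $\pi$ can be chosen to carry the fibres of all the states involved. Once that is said, the argument is just linearity of the map $t\mapsto \varphi_t$ together with the characterization of $\dashv$ by support dominance.
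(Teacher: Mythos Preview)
Your proof is correct, but it follows a genuinely different route from the paper's argument. The paper never touches the support characterization of $\dashv$; instead, from the implementation $\varphi_t=t\,\zeta+\lambda(t)\,\varphi$ it derives the elementary Cauchy--Schwarz estimate
\[
\sqrt{\nu_t(x^*x)}\leq t\sqrt{\varrho(x^*x)}+\lambda(t)\sqrt{\nu(x^*x)},\qquad x\in M,
\]
and then applies Definition \ref{absstet} directly: if a bounded net $\{x_\alpha\}$ satisfies $\omega(x_\alpha^*x_\alpha)\to 0$, the hypotheses $\varrho\dashv\omega$ and $\nu\dashv\omega$ force the right-hand side to vanish, hence $\nu_t(x_\alpha^*x_\alpha)\to 0$. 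This is done entirely at the level of $M$, so there is no need to enlarge the representation to make $\omega$ implementable.

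Your approach via Theorem \ref{suppabsstet} trades the Cauchy--Schwarz computation for the structural observation that $\varphi_t$ is a linear combination of vectors fixed by $s(\omega_\pi)$. The cost is the passage to the direct-sum representation so that all supports live in a common $vN$-algebra; the benefit is that the argument is conceptually transparent (linearity plus support dominance) and would adapt verbatim to any curve implemented by vectors in $[\pi(M)'\!\varphi]+[\pi(M)'\!\zeta]$. The paper's estimate, on the other hand, is slightly sharper information: it bounds $\sqrt{\nu_t(x^*x)}$ quantitatively in terms of the endpoints, which could be of independent use.
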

\begin{proof}
Let $[0,1]\ni t\longmapsto \varphi_t$ be the implementing family of $\gamma$, with $\varphi_t=t\,\zeta+\lambda(t)\,\varphi$, see \eqref{geo0a}, with $\varphi\in {\mathcal S}_{\pi,M}(\nu)$ and $\zeta\in  {\mathcal S}_{\pi,M}(\varrho)$. By the Cauchy-Schwarz inequality and since $t$ and $\lambda(t)$ are non-negative, for each $x\in M$ one infers that
\[ \begin{split}
\nu_t(x^*x)= \|x\varphi_t\|_\pi^2 & = t^2\varrho(x^*x)+\lambda(t)^2\nu(x^*x)+ 2\,t\lambda(t)\Re\langle x\zeta,x\varphi\rangle_\pi\\
&\leq t^2\varrho(x^*x)+\lambda(t)^2\nu(x^*x)+ 2\,t\lambda(t) \sqrt{\varrho(x^*x)} \sqrt{\nu(x^*x)}
\end{split} \]
Hence, the conclusion is that, for any $t\in [0,1]$ and any $x\in M$,
\begin{equation}\label{geoconv3}
    \sqrt{\nu_t(x^*x)}\leq t\sqrt{\varrho(x^*x)}+ \lambda(t) \sqrt{\nu(x^*x)}
\end{equation}
is fulfilled. Now, suppose $\varrho,\nu\in \Omega^0_M(\omega)$, i.e. $\varrho\dashv\omega$ and $\nu\dashv\omega$ hold.
In view of \eqref{geoconv3} and by Definition \ref{absstet} from the supposition $\nu_t\dashv \omega$  follows. Hence $|\gamma|\subset \Omega^0_M(\omega)$.
\end{proof}
\begin{definition}\label{geodisj}
A subset $\Omega\subset {\mathcal S}(M)$ is termed `arc-determining set' if for any two states $\varrho, \nu\in \Omega$,  $\nu\not=\varrho$,  the geodesic arc between $\nu$ and $\varrho$ is uniquely determined.
\end{definition}
As a consequence of  Theorem \ref{einga} the following is true for arc-determining sets.
\begin{corolla}\label{geodisj1}
A non-void subset $\Omega\subset {\mathcal S}(M)$ is arc-determining if, and only if, for  any two states $\varrho, \nu\in \Omega$, with $\nu\not=\varrho$, $\varrho_\nu$ and $\nu_\varrho$ are mutually disjoint. \end{corolla}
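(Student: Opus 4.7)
The plan is to reduce the assertion directly to Theorem \ref{einga}, applied pair by pair to the states belonging to $\Omega$. There will be no genuine obstacle; the only thing to check is that Definition \ref{geodisj} (uniqueness of the geodesic arc between any two distinct states of $\Omega$) exactly matches the pairwise criterion of Theorem \ref{einga} (disjointness of $\varrho_\nu$ and $\nu_\varrho$).

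First I would assume $\Omega$ is arc-determining and fix $\nu,\varrho\in \Omega$ with $\nu\not=\varrho$. By Definition \ref{geodisj} the geodesic arc $\gamma\in{\mathcal C}^{\nu,\varrho}(M)$ is unique, hence Theorem \ref{einga} (the ``only if'' direction) yields that $\varrho_\nu$ and $\nu_\varrho$ are mutually disjoint. Since the pair $\nu,\varrho$ was arbitrary, one obtains the asserted disjointness for every pair of distinct states in $\Omega$.

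Conversely, assuming the pairwise disjointness of $\varrho_\nu$ and $\nu_\varrho$ for each $\nu,\varrho\in \Omega$ with $\nu\not=\varrho$, I would apply the ``if'' direction of Theorem \ref{einga} to each such pair: it provides uniqueness of $\gamma\in{\mathcal C}^{\nu,\varrho}(M)$. By Definition \ref{geodisj} this means that $\Omega$ is arc-determining, which closes the equivalence.
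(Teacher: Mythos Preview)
Your proposal is correct and follows exactly the approach the paper intends: the corollary is stated as a direct consequence of Theorem \ref{einga}, and your pairwise application of both directions of that theorem is precisely what is needed.
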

\begin{example}\label{geodisj2}
\begin{enumerate}
\item \label{geodisj2a}
For a geodesic arc $\gamma$ and each choice of $a,b\in {\mathbb R}$,  $0\leq a<b\leq 1$ and $b-a<1$, the supporting set $|\gamma|[a,b]|$ of $\gamma|[a,b]$ is arc-determining.
\item \label{geodisj2b}
The supporting set $ |\gamma|$ of $\gamma\in {\mathcal{C}}^{\nu,\varrho}(M)$ is arc-determining if, and only if, $\gamma$ is uniquely determined in $ {\mathcal{C}}^{\nu,\varrho}(M) $, i.e. if  $\varrho_\nu$ and $\nu_\varrho$ are mutually disjoint.
\item \label{geodisj2c}
Each stratum $\Omega$ is arc-determining.
\end{enumerate}
\end{example}
\begin{proof}
Let $\gamma:[0,1]\ni t\longmapsto \nu_t$ be a geodesic arc, with $\nu\not=\varrho$, and $\nu_0=\nu$, $\nu_1=\varrho$. In line with Definition \ref{allg2c0}, suppose $\gamma$ to be implemented by  $[0,1]\ni t\longmapsto \varphi_t\in {\mathcal S}_{\pi,M}(\nu_t)$, with $\varphi_t$ given in accordance with  \eqref{geo0a}. Then, according to Corollary \ref{porter3},  \eqref{porter4a} and Corollary \ref{uniqueimp}\,\eqref{uniqueimp0}$\Leftrightarrow$\eqref{uniqueimp3a}, by assumptions on $a,b$, for each choice of $s,t$ with $a\leq s<t\leq b$ at least one of $(\nu_t)_{\nu_s}$ or $(\nu_s)_{\nu_t}$ has to be vanishing. Anyhow, this especially means that $(\nu_t)_{\nu_s}$ and $(\nu_s)_{\nu_t}$ are mutually disjoint. By Theorem \ref{einga} the geodesic arc connecting the two  states $\nu_t$ and $\nu_s$ of $|\gamma|$ then has to be uniquely determined. Thus $|\gamma|[a,b]|$ is an arc-determining set. This is \eqref{geodisj2a}. From this in view of Theorem \ref{einga} and  Corollary \ref{uniqueimp}\,\eqref{uniqueimp0}$\Leftrightarrow$\eqref{uniqueimp3a} together with \eqref{porter4b}--\eqref{porter4c} then also the validity of \eqref{geodisj2b} follows.
Finally, \eqref{geodisj2c} holds since for $\nu,\varrho\in \Omega$ we have $s(\nu_\pi)=s(\varrho_\pi)$, with respect to any $^*$-representation $\{\pi,{\mathcal H}_\pi\}$ with respect to which the $\pi$-fibres of both states are non-void. This e.g.~implies that $\mu\in M^*_+$ with $\mu\leq \varrho$ and $\mu\perp\nu$ has to vanish. Thus, $\varrho_\nu={\mathsf 0}$, and which makes that $\varrho_\nu$ and $\nu_\varrho$ are mutually disjoint in a trivial way. Then \eqref{geodisj2c} follows, by Theorem \ref{einga}.
\end{proof}
\subsection{The inner metric distance in Bures geometry}\label{geodistbu}
The importance of the curves of the class   ${\mathcal{C}}^{\nu,\varrho}(M)$  is that they yield special examples of continuous lines of shortest Bures length  connecting $\nu$ and $\varrho$. In line with the proof, the inner metric distance coming along with the Bures metric will be  under investigation. Thereby, except for possible  reparameterizations, the geodesic arcs of ${\mathcal{C}}^{\nu,\varrho}(M)$ will prove to be the only continuous lines of shortest Bures length in ${\mathcal S}(M)$ and doing the job.
\subsubsection{Minimizing the Bures length of a curve between two
states}\label{minbu}
Following Lemma \ref{geoarclength}, formula \eqref{gal}, depending from $\nu,\varrho\in {\mathcal S}(M)$ the Bures length of each geodesic arc extending from $\nu$ to $\varrho$ is given by the expression
\begin{equation}\label{geodist}
        {\mathfrak d}_B(M|\nu,\varrho)=\arcsin\sqrt{1-P(M|\nu,\varrho)}=\arccos{F(M|\nu,\varrho)}
\end{equation}
\begin{lemma}\label{dreieck}
 ${\mathcal S}(M)\times {\mathcal S}(M)\ni \{\nu,\varrho\}\longmapsto {\mathfrak d}_B(M|\nu,\varrho)$ is a metric distance.
\end{lemma}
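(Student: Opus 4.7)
Non-negativity, symmetry, and the identity of indiscernibles are almost immediate. Since $0\le F(M|\nu,\varrho)\le 1$ (with the bound $F\le 1$ coming from Definition \ref{genprob.2} and the Cauchy--Schwarz inequality), we have ${\mathfrak d}_B(M|\nu,\varrho)\in[0,\pi/2]$. Symmetry of ${\mathfrak d}_B$ reduces to the symmetry of $F$, which is clear from Definition \ref{genprob.2}. For definiteness, ${\mathfrak d}_B(M|\nu,\varrho)=0$ is equivalent to $F(M|\nu,\varrho)=1$, which via \eqref{pcont.1} is equivalent to $d_B(M|\nu,\varrho)=0$, and the latter has already been noted (in the paragraph preceding Lemma \ref{rinow0} and also after Theorem \ref{bas5}) to be equivalent to $\nu=\varrho$.

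The real content is the triangle inequality
\[
{\mathfrak d}_B(M|\nu,\varrho)\ \le\ {\mathfrak d}_B(M|\nu,\omega)+{\mathfrak d}_B(M|\omega,\varrho),\qquad \nu,\varrho,\omega\in{\mathcal S}(M).
\]
The plan is to reduce this to the triangle inequality for arc--length distance on the unit sphere of a Hilbert space. First, invoke Lemma \ref{king}: there is a unital $^*$-representation $\{\pi,{\mathcal H}_\pi\}$ and a common vector $\xi\in{\mathcal S}_{\pi,M}(\omega|\nu)\cap{\mathcal S}_{\pi,M}(\omega|\varrho)$. By Definition \ref{twice1} we may then pick $\varphi\in{\mathcal S}_{\pi,M}(\nu)$ and $\psi\in{\mathcal S}_{\pi,M}(\varrho)$ with
\[
\langle\varphi,\xi\rangle_\pi=F(M|\nu,\omega),\qquad \langle\psi,\xi\rangle_\pi=F(M|\omega,\varrho),
\]
both of which are real and non-negative. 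Using Definition \ref{genprob.2} to bound $F(M|\nu,\varrho)$ from below, one gets
\[
F(M|\nu,\varrho)\ \ge\ |\langle\varphi,\psi\rangle_\pi|\ \ge\ \Re\langle\varphi,\psi\rangle_\pi,
\]
and hence, since $\arccos$ is decreasing,
\[
{\mathfrak d}_B(M|\nu,\varrho)\ \le\ \arccos\bigl(\Re\langle\varphi,\psi\rangle_\pi\bigr).
\]

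It now suffices to establish the elementary fact that on the unit sphere of a Hilbert space the function
\[
\theta(u,v)\ =\ \arccos\bigl(\Re\langle u,v\rangle_\pi\bigr)
\]
satisfies the triangle inequality. Applied to $u=\varphi$, $v=\xi$, $w=\psi$, this yields
\[
\arccos\bigl(\Re\langle\varphi,\psi\rangle_\pi\bigr)\ \le\ \theta(\varphi,\xi)+\theta(\xi,\psi)=\arccos F(M|\nu,\omega)+\arccos F(M|\omega,\varrho),
\]
which combined with the previous inequality gives the triangle inequality for ${\mathfrak d}_B$.

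The triangle inequality for $\theta$ is the spherical triangle inequality and, restricting to the at most three--dimensional real subspace spanned by $\varphi,\xi,\psi$, reduces to the classical fact on $S^2$ that great--circle distance is a metric. Equivalently, setting $a=\Re\langle\varphi,\xi\rangle_\pi$, $b=\Re\langle\xi,\psi\rangle_\pi$, and writing $\xi=\Re\langle\varphi,\xi\rangle_\pi\varphi+\xi^\perp$ with $\xi^\perp\perp\varphi$ (in the real Euclidean structure) and an analogous decomposition for $\psi$, a direct Cauchy--Schwarz estimate gives
\[
\Re\langle\varphi,\psi\rangle_\pi\ \ge\ ab-\sqrt{(1-a^2)(1-b^2)}=\cos(\arccos a+\arccos b),
\]
from which the inequality $\theta(\varphi,\psi)\le\theta(\varphi,\xi)+\theta(\xi,\psi)$ follows by monotonicity of $\arccos$.

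The main obstacle is the sign issue: one needs the two relevant inner products to be simultaneously non-negative real so that their $\arccos$ actually equals $\arccos F$, and this is precisely what the common--vector result of Lemma \ref{king} secures. Once the vectors $\varphi,\xi,\psi$ are in place, the remainder is the standard spherical triangle inequality.
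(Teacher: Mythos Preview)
Your proof is correct and takes a genuinely different route from the paper's. The paper proves the triangle inequality indirectly: it first treats the special case $\nu\dashv\omega$, $\varrho\dashv\omega$ by concatenating the two geodesic arcs $\hat\gamma\in{\mathcal C}^{\nu,\omega}(M)$ and $\tilde\gamma\in{\mathcal C}^{\omega,\varrho}(M)$ (using Theorem~\ref{suppabsstet} to pin down the implementing vectors), computes the resulting Bures length as $\varTheta[\hat\gamma]+\varTheta[\tilde\gamma]={\mathfrak d}_B(M|\nu,\omega)+{\mathfrak d}_B(M|\omega,\varrho)$, and then invokes the spherical length bound of Lemma~\ref{kugel} to get $\arcsin\sqrt{1-|\langle\varphi_0,\varphi_1\rangle|^2}$ as a lower bound for this length, which dominates ${\mathfrak d}_B(M|\nu,\varrho)$; the general case is then recovered by approximating $\omega$ with $\omega_n=\frac{n}{n+1}\omega+\frac{1}{n+1}\,\frac{\nu+\varrho}{2}$ and using upper semicontinuity of $P$ together with Lemma~\ref{bas3}\,\eqref{bas3b}.

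Your argument bypasses the length machinery and the approximation step entirely by going straight to Lemma~\ref{king}, which hands you the simultaneous implementing vector $\xi$ for $\omega$, after which the triangle inequality becomes the elementary spherical triangle inequality in the real Euclidean structure $\langle\cdot,\cdot\rangle_{\pi,\mathbb R}$. This is shorter and more transparent; it is in fact the same circle of ideas the paper itself uses later in the proof of Lemma~\ref{besonders} (the equality case). The only cosmetic issue is that your sketch of the Cauchy--Schwarz step decomposes $\xi$ with respect to $\varphi$, whereas the cleanest version decomposes both $\varphi$ and $\psi$ with respect to $\xi$ (so $\varphi=a\xi+\varphi^\perp$, $\psi=b\xi+\psi^\perp$, giving $\Re\langle\varphi,\psi\rangle_\pi=ab+\Re\langle\varphi^\perp,\psi^\perp\rangle_\pi\ge ab-\sqrt{(1-a^2)(1-b^2)}$); this is purely presentational. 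What the paper's approach buys in exchange for its extra length is that it makes the link between ${\mathfrak d}_B$ and Bures arc-length explicit already at this stage, anticipating the inner-metric identification in Theorem~\ref{innergeodist}.
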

\begin{proof}
First note that ${\mathfrak d}_B(M|\nu,\varrho)\geq 0$, and in view of \eqref{pcont.1} and \eqref{pcont.1aa}  one has
\begin{equation}\label{metzu}
    \sqrt{1-P(M|\nu,\varrho)}=d_B(M|\nu,\varrho)\sqrt{\frac{1+{F(M|\nu,\varrho)}}{2}}
\end{equation}
By \eqref{geodist} and \eqref{metzu}, ${\mathfrak d}_B(M|\nu,\varrho)=0$ is equivalent to $d_B(M|\nu,\varrho)=0$.  Thus ${\mathfrak d}_B(M|\nu,\varrho)=0$ iff $\nu=\varrho$. Obviously, since both $d_B(M|\nu,\varrho)$ and $P(M|\nu,\varrho)$ are symmetric, ${\mathfrak d}_B(M|\nu,\varrho)$ is symmetric, too. It remains to be shown that the triangle inequality
\begin{equation}\label{drei}
{\mathfrak d}_B(M|\nu,\varrho)\leq {\mathfrak d}_B(M|\nu,\omega)+{\mathfrak d}_B(M|\omega,\varrho)
\end{equation}
holds, for any triplet of states $\nu,\varrho,\omega\in {\mathcal S}(M)$.
We are going to deal with the special case where $\nu\dashv\omega$ and $\varrho\dashv\omega$ first. Let $\{\pi,{\mathcal H}_\pi\}$ be a unital $^*$-representation such that all three states have non-void $\pi$-fibres. Let $\varphi\in {\mathcal S}_{\pi,M}(\omega)$. By Theorem \ref{suppabsstet} there exist unique implementing vectors $\varphi_1\in  {\mathcal S}_{\pi,M}(\varrho)$ and $\varphi_0\in  {\mathcal S}_{\pi,M}(\nu)$
and obeying $d_B(M|\nu,\omega)=\|\varphi_0-\varphi\|_\pi$ and $d_B(M|\omega,\varrho)=\|\varphi-\varphi_1\|_\pi$.
Then, let $[0,1]\ni t\longmapsto \hat{\varphi}_t\in {\mathcal H}_\pi$ be the implementing family of  the geodesic arc $\hat{\gamma}$ extending from $\nu$ to $\omega$ and constructed  with the help of
$\varphi_0\in  {\mathcal S}_{\pi,M}(\nu)$ and $\varphi\in {\mathcal S}_{\pi,M}(\omega)$. Analogously, be  $[0,1]\ni t\longmapsto \tilde{\varphi}_t\in {\mathcal H}_\pi$ the implementing family of
the geodesic arc $\tilde{\gamma}$ extending from $\omega$ to $\varrho$ and constructed by means of
$\varphi\in  {\mathcal S}_{\pi,M}(\omega)$ and $\varphi_1\in {\mathcal S}_{\pi,M}(\varrho)$.
Define an implementation $[0,1]\ni t\longmapsto \varphi_t\in {\mathcal H}_\pi$ for a curve $\gamma: [0,1]\ni t\longmapsto \nu_t$ extending from $\nu_0=\nu$ over $\nu_{1/2}=\omega$ to $\nu_1=\varrho$ as follows
\begin{align}\label{famdef}
\varphi_t & =
\begin{cases}
   \hat{\varphi}_{2t}& \text{for $0\leq t\leq \frac{1}{2}$}\\
   &\\
    \tilde{\varphi}_{2t-1}&\text{for $\frac{1}{2}\leq t\leq 1$}
\end{cases}
\end{align}
Note that $(\varphi_t)$ is satisfying $h^\pi_{\varphi_t,\varphi_s}\geq 0$ on $\pi(M)^{\,\prime}$ provided $t,s\in [0,0.5]$ or $t,s\in [0.5,1]$. By Theorem \ref{bas5}\,\eqref{bas53}, for the Bures length of $\gamma$ we may conclude as follows
\begin{eqnarray*}
  \varTheta[\gamma] &=& \varTheta[\hat{\gamma}]+\varTheta[\tilde{\gamma}] \\
   &=& \lim_{\tau'} \sum_{j=0}^{n_{\tau'}} d_B(M|\nu_{{r}_j},\nu_{{r}_{j+1}}) + \lim_{\tau''} \sum_{j=0}^{n_{\tau''}} d_B(M|\nu_{s_j},\nu_{s_{j+1}})\\
   &=& \lim_{\tau'} \sum_{j=0}^{n_{\tau'}} \|\varphi_{r_j} -\varphi_{r_{j+1}}\|_\pi + \lim_{\tau''} \sum_{j=0}^{n_{\tau''}} \|\varphi_{s_j} -\varphi_{s_{j+1}}\|_\pi\\
   & = & \lim_{\tau_0} \sum_{j=0}^{n_{\tau_0}} \|\varphi_{t_j} -\varphi_{t_{j+1}}\|_\pi
\end{eqnarray*}
Thereby, whereas $\tau'=\{r_0,r_1,\ldots,r_{n+1}\}$ with $n=n_{\tau'}\in {{\mathbb{N}}}\cap \{0\}$ and $0=r_0<r_1<\cdots <r_n<r_{n+1}=0.5$ is running through the ascending net of the finite partitions of the interval $[0,0.5]$, and  $\tau''=\{s_0,s_1,\ldots,s_{n+1}\}$ with $n=n_{\tau''}\in {{\mathbb{N}}}\cap \{0\}$ and $0.5=s_0<s_1<\cdots <s_n<s_{n+1}=1$ is running through the directed net of the finite partitions of the interval $[0.5,1]$, the variable $\tau_0$ is thought to run through the ascending net of those finite partitions of $[0,1]$ containing $0.5$. Now, if a finite partition $\tau$ of the unit interval is given, then $\tau_0=\tau\cup \{0.5\}$ is a finite partition with $\tau\subset \tau_0$ and therefore according to the triangle inequality for $\|\cdot\|_\pi$ we find that
 $$\sum_{j=0}^{n_{\tau}} \|\varphi_{t_j} -\varphi_{t_{j+1}}\|_\pi\leq \sum_{j=0}^{n_{\tau_0}} \|\varphi_{t_j} -\varphi_{t_{j+1}}\|_\pi$$ has to hold. Hence, the conclusion is that
 \begin{equation*}
    \lim_{\tau} \sum_{j=0}^{n_{\tau}} \|\varphi_{t_j} -\varphi_{t_{j+1}}\|_\pi=\lim_{\tau_0} \sum_{j=0}^{n_{\tau_0}} \|\varphi_{t_j} -\varphi_{t_{j+1}}\|_\pi
 \end{equation*}
 has to be fulfilled, with $\tau$ running through the ascending net of all finite partitions of the unit interval.
 Thus, in view of the above we see that the following is fulfilled
 \begin{equation*}
    \varTheta[\gamma]=\varTheta[\hat{\gamma}]+\varTheta[\tilde{\gamma}]=\lim_{\tau} \sum_{j=0}^{n_{\tau}} \|\varphi_{t_j} -\varphi_{t_{j+1}}\|_\pi
 \end{equation*}
 with continuous map $[0,1]\ni t\longmapsto \varphi_t\in {\mathcal H}_\pi$, with boundary conditions $\varphi_0\in  {\mathcal S}_{\pi,M}(\nu)$, $\varphi_1\in  {\mathcal S}_{\pi,M}(\varrho)$. Application of Lemma \ref{kugel} under these conditions yields
 \begin{equation*}
    \arcsin{\sqrt{1-|\langle\varphi_0,\varphi_1\rangle_\pi|^2}}\leq
\lim_\tau {\sum_{j=0}^{n_\tau}
{\|\varphi_{t_j}-\varphi_{t_{j+1}}\|_\pi}}=\varTheta[\hat{\gamma}]+\varTheta[\tilde{\gamma}]
 \end{equation*}
By $P(M|\nu,\varrho)\geq |\langle\varphi_0,\varphi_1\rangle_\pi|^2$ then
$\arcsin{\sqrt{1-P(M|\nu,\varrho)}}\leq\arcsin{\sqrt{1-|\langle\varphi_0,\varphi_1\rangle_\pi|^2}}$ follows.
Hence, in view of the previous and \eqref{geodist} the conclusion is
\begin{equation*}
{\mathfrak d}_B(M|\nu,\varrho)=\arcsin{\sqrt{1-P(M|\nu,\varrho)}}\leq \varTheta[\hat{\gamma}]+\varTheta[\tilde{\gamma}]
\end{equation*}
Thus, since  ${\mathfrak d}_B(M|\nu,\omega)=\varTheta[\hat{\gamma}]$ and ${\mathfrak d}_B(M|\omega,\varrho)=\varTheta[\tilde{\gamma}]$ are fulfilled, the validity of formula \eqref{drei} is seen provided both $\nu\dashv\omega$ and $\varrho\dashv \omega$ are supposed. To release from the latter condition, let $\nu,\varrho,\omega$ be any triplet of states. For $n\in {{\mathbb{N}}}$, define $\omega_n$ by
\begin{equation*}
    \omega_n=\frac{n}{n+1}\,\omega+\frac{1}{n+1}\,\biggl(\frac{\nu+\varrho}{2}\biggr)
\end{equation*}
Then, in the sense of domination of positive linear forms, see Remark \ref{saka}\,\eqref{saka1}, we have $\nu\ll \omega_n$ and $\varrho\ll \omega_n$, and which implies $\nu\dashv \omega_n$ and $\varrho\dashv \omega_n$, for any $n$. By the just proven, for any $n\in {{\mathbb{N}}}$ we therefore have
\begin{equation}\label{appdrei}
   {\mathfrak d}_B(M|\nu,\varrho)\leq {\mathfrak d}_B(M|\nu,\omega_n)+{\mathfrak d}_B(M|\omega_n,\varrho)
\end{equation}
 Also, note that $\lim_n \omega_n=\omega$ is fulfilled. Now, in line with Lemma \ref{bas3}\,\eqref{bas3b}, let  positive, invertible $a_n\in M_+$ be chosen such that $$\nu(a_n)\,\omega_n(a_n^{-1})\leq P(M|\nu,\omega_n)+\frac{1}{n}$$
 From  this and $\frac{n}{n+1}\,\omega\leq  \omega_n$ together with Lemma \ref{bas3}\,\eqref{bas3b} and \eqref{pcont.1aa}, we infer that
\begin{equation*}
  \frac{n}{n+1}\, P(M|\nu,\omega)  \leq \frac{n}{n+1}\,\nu(a_n)\,\omega(a_n^{-1})\leq \nu(a_n)\,\omega_n(a_n^{-1})\leq P(M|\nu,\omega_n)+\frac{1}{n}
\end{equation*}
Hence, $P(M|\nu,\omega)  \leq \liminf_n P(M|\nu,\omega_n)$ follows. By upper semi-continuity of $P(M|\cdot,\cdot)$, see  \eqref{gemhalbstet} and \eqref{pcont.1aa}, then $P(M|\nu,\omega)=\lim_n P(M|\nu,\omega_n)$ is seen.\\
Analogously, $P(M|\omega,\varrho)=\lim_n P(M|\omega_n,\varrho)$ holds. By definition in  \eqref{geodist} from this
${\mathfrak d}_B(M|\nu,\omega)=\lim_n {\mathfrak d}_B(M|\nu,\omega_n)$ and ${\mathfrak d}_B(M|\omega,\varrho)=\lim_n {\mathfrak d}_B(M|\omega_n,\varrho)$ are implied. Now, on going to the limit for $n\to \infty$ in \eqref{appdrei}, the triangle inequality in respect of the triplet $\nu,\omega,\varrho$ will follow.
\end{proof}
Consider now a parameterized continuous curve ${\mathcal{C}}:[0,1]\ni t\longmapsto \nu_t\in {\mathcal S}(M)$. Then, ${\mathcal{C}}$ is said to connect the states $\nu$ and $\varrho$ if $\nu_0=\nu$  and $\nu_1=\varrho$, and ${\mathcal{C}}$ is said to be implementable, if there exists  a unital $^*$-representation $\{\pi,{\mathcal H}_\pi\}$ and a continuous map $[0,1]\ni t\longmapsto \varphi_t\in {\mathcal H}_\pi$ such that $\varphi_t\in {\mathcal S}_{\pi,M}(\nu_t)$, for any $t$.

Since by Lemma \ref{geoarclength} the value ${\mathfrak d}_B(M|\nu,\varrho)$ appears as the Bures length of a parameterized curve connecting $\nu$ and $\varrho$, the following lower estimate must hold:
\begin{subequations}\label{arcest}
\begin{equation}\label{lowerarc}
  {d}_B(M|\nu,\varrho)\leq {\mathfrak d}_B(M|\nu,\varrho)
\end{equation}
Now, suppose $\nu,\varrho$ are states obeying $\|\nu-\varrho\|_1\leq 1$. Then, owing to Corollary  \ref{uppdiff} $d_B(M|\nu,\varrho)\leq \sqrt{\|\nu-\varrho\|_1}$ and by formula \eqref{metzu}, $\sqrt{1-P(M|\nu,\varrho)}\leq d_B(M|\nu,\varrho)\leq 1$
can be inferred to hold. Hence, by monotony of the $\arcsin$-function one has
\[
 {\mathfrak d}_B(M|\nu,\varrho)=\arcsin \sqrt{1-P(M|\nu,\varrho)}\leq \arcsin d_B(M|\nu,\varrho)
\]
Combined with the elementary estimate $\arcsin x\leq x(1 + x^2)$, for $0\leq x\leq 1$, in view of formula \eqref{uppdiff1} for $x=d_B(M|\nu,\varrho)$ and under the above suppositions we arrive at
\begin{equation}\label{metzu1}
   {\mathfrak d}_B(M|\nu,\varrho)\leq d_B(M|\nu,\varrho)\bigl(1+\|\nu-\varrho\|_1\bigr)
\end{equation}
\end{subequations}
Now we are going to apply estimates of type \eqref{arcest} to a continuous curve  ${\mathcal{C}}:[0,1]\ni t\longmapsto \nu_t$  of  length $\varTheta[{\mathcal{C}}]$ and connecting the states $\nu$ and $\varrho$ in the state space of a unital ${\mathsf C}^*$-algebra $M$. Let $1\geq \varepsilon>0$, and be $\tau_\varepsilon$ a finite partition $\{t^\varepsilon_0,t^\varepsilon_1,\ldots,t^\varepsilon_n,t^\varepsilon_{n+1}\}$ of $[0,1]$, with $n=n_{\tau_\varepsilon}\in {{\mathbb{N}}}\cup \{0\}$ and  $0=t^\varepsilon_0<t^\varepsilon_1<\cdots< t^\varepsilon_n<t^\varepsilon_{n+1}=1$,
such that, for all $0\leq i\leq n_{\tau_\varepsilon}$,  $$\|\nu_{t^\varepsilon_i}-\nu_{t^\varepsilon_{i+1}}\|_1\leq \varepsilon$$
 As a continuous map  over a compact interval, the implementing family $(\nu_t)$ is uniformly continuous, and thus such a partition $\tau_\varepsilon$ has to exist. Hence, \eqref{arcest} applies to all pairs $\{\nu_{t_i},\nu_{t_{i+1}}\}$ of each partition $\tau$ of $[0,1]$ obeying $\tau\supset\tau_\varepsilon$.  The result is
\[
d_B(M|\nu_{t_i},\nu_{t_{i+1}})\leq  {\mathfrak d}_B(M|\nu_{t_i},\nu_{t_{i+1}})\leq d_B(M|\nu_{t_i},\nu_{t_{i+1}})\bigl(1+\varepsilon\bigr)
\]
for $0\leq i\leq n_\tau$. By summing up from this we get
\[
\sum_{j=0}^{n_{\tau}}d_B(M|\nu_{t_j},\nu_{t_{j+1}})\leq \sum_{j=0}^{n_{\tau}} {\mathfrak d}_B(M|\nu_{t_j},\nu_{t_{j+1}})\leq \Biggl(\sum_{j=0}^{n_{\tau}}d_B(M|\nu_{t_j},\nu_{t_{j+1}})\Biggr)\bigl(1+\varepsilon\bigr)
\]
Now, by definition of a length, obviously
\[
\varTheta[{\mathcal{C}}]=\lim_\tau \sum_{j=0}^{n_\tau} { d}_B(M|\nu_{t_j},\nu_{t_{j+1}})=\lim_{\tau\supset\tau_\varepsilon} \sum_{j=0}^{n_\tau} { d}_B(M|\nu_{t_j},\nu_{t_{j+1}})
\]
Also, by Lemma \ref{dreieck}, since ${\mathfrak d}_B$ is a metric distance, similarly we have existence of
\[
\lim_\tau \sum_{j=0}^{n_\tau} {\mathfrak d}_B(M|\nu_{t_j},\nu_{t_{j+1}})=\lim_{\tau\supset\tau_\varepsilon} \sum_{j=0}^{n_\tau} {\mathfrak d}_B(M|\nu_{t_j},\nu_{t_{j+1}})
\]
and which in view of the above is satisfying the following estimates
\[
\varTheta[{\mathcal{C}}]\leq \lim_\tau \sum_{j=0}^{n_\tau} {\mathfrak d}_B(M|\nu_{t_j},\nu_{t_{j+1}})\leq \varTheta[{\mathcal{C}}]\bigl(1+\varepsilon\bigr)
\]
Since the latter has to be true with $\varepsilon>0$ arbitrarily small, from this the validity of
\begin{equation}\label{summe}
 \varTheta[{\mathcal{C}}]=\lim_\tau \sum_{j=0}^{n_\tau} {\mathfrak d}_B(M|\nu_{t_j},\nu_{t_{j+1}})
\end{equation}
can be inferred to hold, where $\tau$ is running through the ascendingly directed net of the finite partitions of the unit interval.
The formula \eqref{summe} is the key tool for proving an important fact about continuous curves of states connecting $\nu$ and $\varrho$.
\begin{theorem}\label{innergeodist}
\begin{subequations}\label{geoabstand}
Let ${\mathcal{C}}$ be a parameterized continuous curve of Bures length $\varTheta[{\mathcal{C}}]$ and  connecting the states $\nu$ and $\varrho$ in the state space of a unital ${\mathsf C}^*$-algebra $M$. Then
\begin{equation}\label{mindist}
   {\mathfrak d}_B(M|\nu,\varrho)\leq \varTheta[{\mathcal{C}}]
\end{equation}
Especially, each geodesic arc $\gamma\in {\mathcal{C}}^{\nu,\varrho}(M)$ is an implementable curve of minimal Bures length and connecting $\nu$ and $\varrho$. Moreover, a continuous curve ${\mathcal{C}}:[0,1]\ni t\longmapsto \nu_t\in {\mathcal S}(M)$ connecting $\nu$ and $\varrho$ is of minimal Bures length if, and only if
\begin{equation}\label{geocon}
 {\mathfrak d}_B(M|\nu,\varrho)=\sum_{j=0}^{n_\tau} {\mathfrak d}_B(M|\nu_{t_j},\nu_{t_{j+1}})
\end{equation}
\end{subequations}
for any finite partition $\tau=\{t_0,t_1,\ldots,t_n,t_{n+1}\}$ of the unit interval, with $n=n_{\tau}\in {{\mathbb{N}}}\cup \{0\}$ and  $0=t_0<t_1<\cdots< t_n<t_{n+1}=1$.
\end{theorem}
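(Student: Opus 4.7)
The plan is to derive the three claims in sequence from \eqref{summe} together with the triangle inequality for $\mathfrak{d}_B$ (Lemma \ref{dreieck}) and the length formula from Lemma \ref{geoarclength}. First, for an arbitrary continuous curve $\mathcal{C}:[0,1]\ni t\longmapsto \nu_t$ connecting $\nu$ and $\varrho$ and any finite partition $\tau=\{t_0,\ldots,t_{n+1}\}$ of the unit interval, iterating the triangle inequality of Lemma \ref{dreieck} gives
\[
\mathfrak{d}_B(M|\nu,\varrho)\leq \sum_{j=0}^{n_\tau}\mathfrak{d}_B(M|\nu_{t_j},\nu_{t_{j+1}}).
\]
Passing to the limit over the ascendingly directed net of partitions and invoking \eqref{summe} yields \eqref{mindist}. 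This establishes part one.

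For the second assertion, I would observe that by the very construction \eqref{geo0} each geodesic arc $\gamma\in\mathcal{C}^{\nu,\varrho}(M)$ comes equipped with a continuous implementation $[0,1]\ni t\mapsto\varphi_t$, hence $\gamma$ is implementable, and by Lemma \ref{geoarclength} its Bures length equals $\mathfrak{d}_B(M|\nu,\varrho)$. Combined with the lower bound \eqref{mindist} just proved, this shows that geodesic arcs attain the infimum of Bures length among all continuous curves connecting $\nu$ and $\varrho$, which gives the minimality assertion.

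For the characterization part, I would use the additivity of Bures length along with part one applied to each sub-arc. Fix a finite partition $\tau$ and denote by $\mathcal{C}_j$ the restriction of $\mathcal{C}$ to $[t_j,t_{j+1}]$; then $\varTheta[\mathcal{C}]=\sum_{j=0}^{n_\tau}\varTheta[\mathcal{C}_j]$, while by \eqref{mindist} applied to each $\mathcal{C}_j$ and by the triangle inequality one has
\[
\mathfrak{d}_B(M|\nu,\varrho)\leq \sum_{j=0}^{n_\tau}\mathfrak{d}_B(M|\nu_{t_j},\nu_{t_{j+1}})\leq \sum_{j=0}^{n_\tau}\varTheta[\mathcal{C}_j]=\varTheta[\mathcal{C}].
\]
If $\mathcal{C}$ is of minimal Bures length, then $\varTheta[\mathcal{C}]=\mathfrak{d}_B(M|\nu,\varrho)$ forces equality throughout, which produces \eqref{geocon}. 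Conversely, assuming \eqref{geocon} for every finite partition $\tau$ and inserting this in the right-hand side of \eqref{summe} gives $\varTheta[\mathcal{C}]=\mathfrak{d}_B(M|\nu,\varrho)$, so that $\mathcal{C}$ realizes the lower bound \eqref{mindist} and hence is of minimal length.

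I expect no serious obstacle here: every non-trivial ingredient (the fact that $\mathfrak{d}_B$ is a genuine metric, the length of a geodesic arc, and the alternative formula \eqref{summe} for the Bures length in terms of $\mathfrak{d}_B$-partition sums) has already been established. The only point requiring a little care is that the $\lim_\tau$ in \eqref{summe} is taken along the directed net of refinements, so that the triangle inequality's monotonicity in refinements ensures the limit is in fact a supremum, which legitimizes passing from ``partition-wise'' inequalities to the limiting ones without loss.
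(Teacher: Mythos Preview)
Your proposal is correct and follows essentially the same approach as the paper: both arguments rest on combining the triangle inequality for $\mathfrak{d}_B$ (Lemma \ref{dreieck}), the length formula for geodesic arcs (Lemma \ref{geoarclength}), and the identity \eqref{summe}. The only cosmetic difference is that for the forward implication of the characterization you argue via additivity of length over sub-arcs and \eqref{mindist} applied to each piece, whereas the paper squeezes the partition sum between $\mathfrak{d}_B(M|\nu,\varrho)$ and the net limit in \eqref{summe} directly; both routes are equivalent and equally short.
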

\begin{proof}
By formula \eqref{summe} and in view of Lemma \ref{dreieck} we conclude that
\[
{\mathfrak d}_B(M|\nu,\varrho)\leq\sum_{j=0}^{n_{\tau}} {\mathfrak d}_B(M|\nu_{t_i},\nu_{t_{i+1}})\leq \lim_\tau \sum_{j=0}^{n_\tau} {\mathfrak d}_B(M|\nu_{t_j},\nu_{t_{j+1}})= \varTheta[{\mathcal{C}}]
\]
and which is \eqref{mindist}. Thus, ${\mathfrak d}_B(M|\nu,\varrho)$ is the minimal Bures length a continuous curve connecting $\nu$ and $\varrho$ can have. Moreover, if $\varTheta[{\mathcal{C}}]={\mathfrak d}_B(M|\nu,\varrho)$ is fulfilled for such a curve, according to formula \eqref{summe} and the triangle inequality for ${\mathfrak d}_B$
\[
{\mathfrak d}_B(M|\nu,\varrho)=\lim_\tau \sum_{j=0}^{n_\tau} {\mathfrak d}_B(M|\nu_{t_j},\nu_{t_{j+1}})\geq \sum_{j=0}^{n_{\tau}} {\mathfrak d}_B(M|\nu_{t_i},\nu_{t_{i+1}})\geq {\mathfrak d}_B(M|\nu,\varrho)
\]
follows. That is, \eqref{geocon} has to hold, for each finite partition $\tau$ of the unit interval.
On the other hand, if the latter is supposed to hold for any finite partition $\tau$, then in taking the net-limit
over $\tau$ within \eqref{geocon} will yield that also
\[
{\mathfrak d}_B(M|\nu,\varrho)=\lim_\tau \sum_{j=0}^{n_\tau} {\mathfrak d}_B(M|\nu_{t_j},\nu_{t_{j+1}})
\]
is fulfilled. By formula \eqref{summe} this is equivalent to ${\mathfrak d}_B(M|\nu,\varrho)=\varTheta[{\mathcal{C}}]$ and which in view of \eqref{mindist} means that ${\mathcal{C}}$ is of minimal Bures length between $\nu$ and $\varrho$.
\end{proof}
\begin{remark}\label{allg2c}
\begin{enumerate}
 \item\label{allg2c0a}
By Theorem \ref{innergeodist}, as a metric space $\{{\mathcal S}(M),d_B(M|\cdot,\cdot)\}$ has the property that, for any $\nu,\varrho\in {\mathcal S}(M)$, the following is fulfilled:
\begin{equation}\label{lestr}
{\mathfrak d}_B(M|\nu,\varrho)=\inf_{\mathcal{C}} \varTheta[{\mathcal{C}}]
\end{equation}
Thereby, $\varTheta[{\mathcal{C}}]$ is the Bures-length of the curve ${\mathcal{C}}$, and the infimum extends over all Bures-rectifiable continuous curves connecting $\nu$ with $\varrho$.
Accordingly, ${\mathfrak d}_B(M|\cdot,\cdot)$ is the geodesic distance or `inner metric' coming along with the Bures distance. For generalities see \cite[\S 15.]{Rino:61}. Under the metric distance ${\mathfrak d}_B$ the state space becomes a so-called `length space', with length structure on  ${\mathcal S}(M)$ given by the pair $\{d_B(M|\cdot,\cdot),{\mathfrak d}_B(M|\cdot,\cdot)\}$.
   \item\label{allg2c1} Relating Theorem \ref{innergeodist},  for vector states $\nu, \varrho$ on a $vN$-algebra,  \eqref{mindist} and the fact  that each $\gamma\in {\mathcal{C}}^{\nu,\varrho}(M)$ under all implementable curves extending from $\nu$ to $\varrho$ is a curve of minimal Bures length, are known \cite{Albe:92.2, Albe:96.1, Pelt:00.1a}.
  \item\label{allg2b3} In special cases, the length structure  $\{d_B(M|\cdot,\cdot),{\mathfrak d}_B(M|\cdot,\cdot)\}$ on the state space ${\mathcal S}(M)$ is well-known. Especially, on a commutative ${\mathsf W}^*$-algebra $M\simeq L^\infty(X,m)$, $(X,m)$ a localizable measure space, with operators acting by multiplication on elements of ${\mathcal H}_\pi=L^2(X,m)$, if $\nu$ and $\varrho$ are normal states, then we will find that
\[
d_B(M|\nu,\varrho)=\biggl\|\sqrt{\frac{d\nu}{dm}}-\sqrt{\frac{d\varrho}{dm}}\,\biggr\|_2
\]
with $d\nu/dm$ and $d\varrho/dm$ being the absolutely continuous to $m$ probability distributions which are corresponding to  $\nu$ and $\varrho$, respectively.
In this case
the pair $\{d_B(M|\cdot,\cdot),{\mathfrak d}_B(M|\cdot,\cdot)\}$ of distance functions up to a factor is known from statistics as Hellinger-distance between probability distributions and the corresponding geodesic distance which by \eqref{geodist} then reads
\[
{\mathfrak d}_B(M|\nu,\varrho)=\arccos{F(M|\nu,\varrho)}=\arccos{\int_X d\/m\,\sqrt{\left(\frac{d\nu}{dm}\right)\left(\frac{d\varrho}{dm}\right)}}
\]
see e.g.~\cite{Chen:72} and \cite{Ragg:84}.
 \item\label{allg2b4}
In the
special case $M\simeq {\mathsf B}({\mathcal H})$\,
the r{\^o}le of the expression
$\arccos{{F(M|\nu,\varrho)}}$ of \eqref{geodist} as a geodesic distance analogous to the Hellinger-distance  case has been yet mentioned
in \cite[p.199, footnote]{Khol:76}, see also \cite{Khol:82}.
\end{enumerate}
\end{remark}
Close with an auxiliary result telling what happens if equality occurs in the triangle inequality for the inner metric distance ${\mathfrak d}_B$, under certain circumstances.
\begin{lemma}\label{besonders}
Let $\varrho\not=\nu$ be states, and be $\omega$ another state such that the relations
\begin{equation}\label{besonders0}
{\mathfrak d}_B(M|\nu,\omega)=\lambda\,{\mathfrak d}_B(M|\nu,\varrho),\ {\mathfrak d}_B(M|\omega,\varrho)=(1-\lambda)\,{\mathfrak d}_B(M|\nu,\varrho)
\end{equation}
hold, with some real  $\lambda\in ]0,1[$. Then, $\omega\in |\gamma|$ for a geodesic arc $\gamma\in {\mathcal{C}}^{\nu,\varrho}(M)$.
\end{lemma}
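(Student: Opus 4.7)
The plan is to reduce everything to a single Hilbert space by means of Lemma \ref{king}, and then to exploit a two-dimensional spherical-triangle computation. Concretely, I would first invoke Lemma \ref{king} to select a unital $^*$-representation $\{\pi,{\mathcal H}_\pi\}$ together with an implementing vector $\varphi\in{\mathcal S}_{\pi,M}(\omega|\nu)\cap{\mathcal S}_{\pi,M}(\omega|\varrho)$. By Definition \ref{twice1}, this supplies vectors $\chi_1\in{\mathcal S}_{\pi,M}(\nu)$ and $\chi_2\in{\mathcal S}_{\pi,M}(\varrho)$ with
\[
\langle\chi_1,\varphi\rangle_\pi=F(M|\nu,\omega),\qquad \langle\chi_2,\varphi\rangle_\pi=F(M|\omega,\varrho),
\]
both non-negative reals.

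Next, I would translate the distance hypotheses into angular identities via \eqref{geodist}: setting $\theta_1=\arccos F(M|\nu,\omega)$, $\theta_2=\arccos F(M|\omega,\varrho)$ and $\theta=\arccos F(M|\nu,\varrho)$, the assumptions give $\theta_1+\theta_2=\theta$, while $\lambda\in(0,1)$ and $\nu\neq\varrho$ force $\theta_1,\theta_2\in(0,\pi/2)$ and $\theta\in(0,\pi/2]$. Decomposing $\chi_j=(\cos\theta_j)\,\varphi+(\sin\theta_j)\,e_j$ with unit vectors $e_j\perp\varphi$ then produces
\[
\Re\langle\chi_1,\chi_2\rangle_\pi=\cos\theta_1\cos\theta_2+\sin\theta_1\sin\theta_2\,\Re\langle e_1,e_2\rangle_\pi\;\geq\;\cos(\theta_1+\theta_2)=F(M|\nu,\varrho).
\]

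The key step is to squeeze this lower bound against the supremum bound $|\langle\chi_1,\chi_2\rangle_\pi|\leq F(M|\nu,\varrho)$ from Lemma \ref{bas3}\,\eqref{bas3aa}. Both inequalities are then forced to be equalities, from which I would extract two consequences: (i) $\langle\chi_1,\chi_2\rangle_\pi=F(M|\nu,\varrho)$ is real and nonnegative, so that by Theorem \ref{bas5}\,\eqref{bas53} the pair $(\chi_1,\chi_2)$ already realizes $d_B(M|\nu,\varrho)=\|\chi_1-\chi_2\|_\pi$; and (ii) $\Re\langle e_1,e_2\rangle_\pi=-1$, which combined with $\|e_1\|_\pi=\|e_2\|_\pi=1$ pins down $e_2=-e_1$, making $\chi_1,\varphi,\chi_2$ co-planar in a concrete way.

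Finally, with these $\chi_1$ and $\chi_2$ I would build a geodesic arc $\gamma\in{\mathcal{C}}^{\nu,\varrho}(M)$ by \eqref{geo0a}, implemented by $\varphi_t=t\,\chi_2+\lambda(t)\,\chi_1$ with $\lambda(t)=-t\cos\theta+\sqrt{1-t^2\sin^2\theta}$. Substituting the explicit decompositions and using $e_2=-e_1$, a direct trigonometric calculation shows that the choice $t_0=\sin\theta_1/\sin\theta\in(0,1)$ gives $\varphi_{t_0}=\varphi$, hence $\omega=\nu_{t_0}\in|\gamma|$, proving the claim. The one genuine obstacle is the very first step: without Lemma \ref{king} furnishing a common implementing vector of $\omega$ that simultaneously attains the fidelity with both $\nu$ and $\varrho$, the spherical picture in ${\mathcal H}_\pi$ is not available and the whole angular argument collapses.
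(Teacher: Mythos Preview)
Your proposal is correct and follows essentially the same approach as the paper's proof: both invoke Lemma \ref{king} to obtain a common implementing vector for $\omega$ attaining fidelity with both $\nu$ and $\varrho$, decompose the implementing vectors of $\nu$ and $\varrho$ into components along and perpendicular to this vector, and then use the squeeze between the Cauchy--Schwarz bound and $|\langle\chi_1,\chi_2\rangle_\pi|\leq F(M|\nu,\varrho)$ to force coplanarity and identify $\omega$ as a point on the resulting geodesic arc. Your angular notation and use of the unit vectors $e_j$ makes the spherical picture a bit more transparent than the paper's version with the ``free'' components $\varphi_f,\zeta_f$, but the argument is the same.
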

\begin{proof}
By Lemma \ref{king} there exists a unital $^*$-representation $\{\pi,{\mathcal H}_\pi\}$ such that
\[
{\mathcal S}_{\pi,M}(\omega|\nu)\cap{\mathcal S}_{\pi,M}(\omega|\varrho)\not=\emptyset
\]
In view of Definition \ref{twice1} the latter  means that $\xi\in {\mathcal S}_{\pi,M}(\omega)$ exists such that, for some $\zeta\in {\mathcal S}_{\pi,M}(\varrho)$ and  $\varphi\in  {\mathcal S}_{\pi,M}(\nu)$ one has
\begin{subequations}\label{redf0}
\begin{equation}\label{bes}
\langle \zeta,\xi\rangle_\pi=F(M|\omega,\varrho)=F_1,\, \langle \varphi,\xi\rangle_\pi=F(M|\omega,\nu)=F_2
\end{equation}
Let $P=F^2$, $P_1=F_1^2$ and $P_2=F_2^2$ be the transition probabilities  corresponding to the pairs $\{\nu,\varrho\}$,  $\{\omega,\varrho\}$ and $\{\nu,\omega\}$, respectively.  Also, note that \eqref{besonders0} amounts to
\[
{\mathfrak d}_B(M|\nu,\omega)+{\mathfrak d}_B(M|\omega,\varrho)={\mathfrak d}_B(M|\nu,\varrho)
\]
Then, in view of \eqref{geodist} by trigonometric calculus one easily infers that
 \begin{equation}\label{redf}
 F=F_1 F_2 -\sqrt{1-P_1} \sqrt{1-P_2}
 \end{equation}
 \end{subequations}
 Let us define $\varphi_f=\varphi-F_2\xi$ and $\zeta_f=\zeta-F_1\xi$. We have $\varphi=\varphi_f+F_2\xi$, $\zeta=\zeta_f+F_1\xi$,
and due to \eqref{bes} both $\varphi_f$ and $\zeta_f$ are orthogonal to $\xi$. Hence, we get
\begin{equation}\label{redf2}
\langle \zeta,\varphi\rangle_\pi=F_1 F_2+\langle\varphi_f,\zeta_f\rangle_\pi
\end{equation}
and see that $\|\zeta_f\|=\sqrt{1-P_1}$, $\|\varphi_f\|=\sqrt{1-P_2}$ with
Cauchy-Schwarz inequality
\begin{equation}\label{redf3}
|\langle \varphi_f,\zeta_f\rangle_\pi|\leq \| \varphi_f\|_\pi \|\zeta_f\|_\pi=\sqrt{1-P_1}\sqrt{1-P_2}
\end{equation}
From \eqref{redf2}, \eqref{redf3} and \eqref{redf} we may conclude as follows:
\[
|\langle \zeta,\varphi\rangle_\pi|\geq F_1 F_2-|\langle\varphi_f,\zeta_f\rangle_\pi|\geq F_1 F_2 -\sqrt{1-P_1} \sqrt{1-P_2}=F
\]
Since owing to $\varphi\in  {\mathcal S}_{\pi,M}(\nu),\,\zeta\in {\mathcal S}_{\pi,M}(\varrho)$ the estimate
$
|\langle \zeta,\varphi\rangle_\pi|\leq F
$ has to be fulfilled, the conclusion from the previous is
that
\begin{equation}\label{redf3a}
|\langle \zeta,\varphi\rangle_\pi|= F_1 F_2-|\langle\varphi_f,\zeta_f\rangle_\pi|=F_1 F_2 -\sqrt{1-P_1} \sqrt{1-P_2}=F
\end{equation}
Hence, in \eqref{redf3} equality has to occur: $$|\langle\varphi_f,\zeta_f\rangle_\pi|=\| \varphi_f\|_\pi \|\zeta_f\|_\pi$$
Thus, since both vectors involved are non-zero, there has to exist  $\mu\in {\mathbb C}\backslash \{0\}$ with
\begin{equation}\label{redf1}
|\mu|=\sqrt{\frac{1-P_1}{1-P_2}}
\end{equation}
such that $\zeta_f=\mu\,\varphi_f$, and therefore, for $\mu$ obeying \eqref{redf1} we get
\begin{equation}\label{refd4}
\zeta=\mu\varphi+(F_1-\mu F_2)\,\xi
\end{equation}
Let $\mu=\exp {\mathsf{i}}\alpha\,|\mu|$, $\alpha\in {\mathbb R}$. Also, in view of \eqref{bes} we may conclude that
\begin{eqnarray*}
\langle \zeta,\varphi\rangle_\pi & =& \mu+(F_1-\mu F_2) F_2=F_1 F_2+\mu(1-P_2)\\
& = & F_1 F_2+\exp {\mathsf{i}}\alpha\,\sqrt{1-P_1} \sqrt{1-P_2}=\\
& = & F+\bigl(1+\exp {\mathsf{i}}\alpha\bigr)\sqrt{1-P_1} \sqrt{1-P_2}
\end{eqnarray*}
Thus, from this and \eqref{redf3a} the conclusion is
\begin{eqnarray*}
F^2 & = & |\langle \zeta,\varphi\rangle_\pi|^2=\bigl|F+(1+\exp {\mathsf{i}}\alpha)\sqrt{1-P_1} \sqrt{1-P_2}\,\bigr|^2\\
& =& F^2+2(1+\cos \alpha)\bigl(F\,\sqrt{1-P_1} \sqrt{1-P_2}+(1-P_1) (1-P_2)\bigr)
\end{eqnarray*}
Hence, $\cos \alpha=-1$ has to be fulfilled. But then, in view of the previous $\zeta\in {\mathcal S}_{\pi,M}(\varrho)$, $\varphi\in {\mathcal S}_{\pi,M}(\nu)$ obey $\langle\zeta,\varphi\rangle_\pi=F$. Hence,  in view of Lemma \ref{bas3} and Theorem \ref{bas5},
\begin{subequations}\label{refd4a}
\begin{equation}\label{refd4b}
h^\pi_{\zeta,\varphi}\geq {\mathsf 0}
\end{equation}
has to be fulfilled. On the other hand, since $\mu=-|\mu|$ holds, in view of \eqref{refd4}
\begin{equation}\label{refd5}
\xi=\frac{\zeta+|\mu|\varphi}{(F_1+|\mu|F_2)}
\end{equation}
\end{subequations}
is fulfilled. That is, $\xi$ is arising as a linear combination with positive coefficients of the implementing vectors $\zeta$ and $\varphi$ and which are obeying \eqref{refd4b}. Accordingly, if the construction schema \eqref{geo0} is considered, then \eqref{refd5}  is showing that $\xi=\varphi_s$ is the implementing vector of the state $\nu_s=\omega$ of a particular geodesic arc $\gamma : [0,1]\ni t\longmapsto\nu_t$  connecting $\nu$ and $\varrho$, with the  parameter value $t=s$ being the (unique) solution of ${\mathfrak d}_B(M|\nu,\nu_t)=\lambda\,{\mathfrak d}_B(M|\nu,\varrho)$ in $[0,1]$, for this see Definition \ref{allg2c0}, too.
\end{proof}
\begin{remark}\label{refd6}
An often occurring case of Lemma \ref{besonders} is $\lambda=1/2$. In this case  $\omega$ is asserted to be  the midpoint state of some geodesic arc connecting $\nu$ and $\varrho$.
\end{remark}
\subsubsection{The geodesic structure of the state space}\label{geoconstr}
Suppose the continuous map
\begin{subequations}\label{Bpara}
\begin{equation}\label{Bpara-1}
{\mathcal{C}}: I \ni t\longmapsto \omega_t\in {\mathcal S}(M)
\end{equation}
with compact interval $I\subset {\mathbb R}$, is a Bures-rectifiable curve ${\mathcal{C}}$ in the state space, with Bures length $\theta_0=\varTheta[\mathcal C]$. For simplicity we may assume that $\min_{t\in I} t=0$. Then, according to Lemma \ref{rinow0}, by means of the continuous invertible  function
 \begin{equation}\label{Bpara0}
I \ni t\longmapsto \theta(t)\in \left[ 0,\theta_0\right]
\end{equation}
which is given by
\begin{equation}\label{Bpara1}
\theta(t)=\int_0^t {\mathop{\mathrm{dil}}}_t^B {\mathcal{C}}\ d\/t
\end{equation}
a re-parameterization of the curve ${\mathcal{C}}$ can be achieved, which is reading in terms of the Bures length $\theta(t)$ counted along the curve ${\mathcal{C}}$ when evolving from the starting state $\omega_0=\nu$ to the state $\omega_t$ up to which a Bures length of $\theta=\theta(t)$ has been evolved. Let $t(\theta)$ be the inverse to \eqref{Bpara0} function. For the  re-parameterization of \eqref{Bpara-1} when reading in terms of the Bures length  $\theta$ along ${\mathcal{C}}$ the notation
\begin{equation}\label{Bpara3}
{\mathcal{C}}: \left[ 0,\theta_0\right]\ni\theta\longmapsto \omega_\theta
\end{equation}
will be used, with $\omega_\theta$ taken as an abbreviating notation of the indirect function  $\omega_{t(\theta)}$ arising in line of the  parameterization  \eqref{Bpara-1} originally used.  Note that parameterizations like \eqref{Bpara3} are distinguished by the property that for $\theta_1<\theta_2$
\begin{equation}\label{Bpara4}
\varTheta\left[{\mathcal{C}}|[ \theta_1,\theta_2] \right] = \theta_2-\theta_1
\end{equation}
\end{subequations}
As usually, in the case of occurance this (up to the orientation uniquely determined) parameterization reading in terms of the natural length-parameter and satisfying \eqref{Bpara4} will be referred to as `normal parameterization' of ${\mathcal{C}}$.

In context of Theorem \ref{innergeodist},  in a twofold manner uniqueness problems relating  continuous curves of minimal Bures length and connecting two given states arise. Here comes the main result about this subject and saying, that geodesic arcs are the only examples possible.
\begin{theorem}\label{gearcuni}
Suppose $\varrho,\nu\in {\mathcal S}(M)$, $\varrho\not=\nu$, and let ${\mathcal{C}}\subset {\mathcal S}(M)$ be any continuous curve of minimal Bures length evolving from $\nu$ to $\varrho$. Then, the following holds:
\begin{enumerate}
\item \label{gac1}
 $|{\mathcal{C}}|=|\gamma|$, for some $\gamma\in {\mathcal{C}}^{\nu,\varrho}(M)$.
\item \label{gac2}
$|{\mathcal{C}}|$ is uniquely determined if, and only if, $\varrho_\nu$ and $\nu_\varrho$ are mutually disjoint.
\end{enumerate}
Especially, both $\gamma$ and ${\mathcal{C}}$ then possess identical normal parameterizations.
\end{theorem}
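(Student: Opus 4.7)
The plan is to reparameterize $\mathcal{C}$ by Bures arclength as $\mathcal{C}:[0,\theta_0]\ni\theta\longmapsto\omega_\theta$ with $\theta_0={\mathfrak d}_B(M|\nu,\varrho)$, and first exploit Theorem \ref{innergeodist}. Applying \eqref{geocon} to the partition $\{0,\theta_1,\theta_2,\theta_0\}$ for any $0\leq\theta_1<\theta_2\leq\theta_0$, together with the bound ${\mathfrak d}_B(\omega_s,\omega_t)\leq|t-s|$ following from \eqref{mindist} and the arclength parameterization, yields ${\mathfrak d}_B(\omega_{\theta_1},\omega_{\theta_2})=\theta_2-\theta_1$ and hence every subcurve $\mathcal{C}|[\theta_1,\theta_2]$ is itself a minimizing curve of the prescribed length.

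Next I would fix $\theta^*\in(0,\theta_0)$ and invoke Lemma \ref{besonders} with $\lambda=\theta^*/\theta_0$ to place $\omega_{\theta^*}$ on some $\gamma^*\in\mathcal{C}^{\nu,\varrho}(M)$. A crucial auxiliary step is to verify that $\gamma^*$ is in fact \emph{uniquely} determined by $\omega_{\theta^*}$: if $\gamma_1,\gamma_2\in\mathcal{C}^{\nu,\varrho}$ both contained $\omega_{\theta^*}$ at their common parameter $t\in(0,1)$ (common because $t$ is fixed by ${\mathfrak d}_B(M|\nu,\omega_{\theta^*})$ via \eqref{geodist}), then since by Example \ref{geodisj2}\,\eqref{geodisj2a} the support $|\gamma_1|[0,t]|$ is arc-determining, the geodesic arc from $\nu$ to $\omega_{\theta^*}$ is unique, forcing $\gamma_1|[0,t]=\gamma_2|[0,t]$; the analogous argument on $[t,1]$ gives $\gamma_1=\gamma_2$.

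Taking $\gamma=\gamma^*$ with the choice $\theta^*=\theta_0/2$, I would then show by dyadic bisection that $|\mathcal{C}|\subseteq|\gamma|$. Define $S=\{\theta\in[0,\theta_0]:\omega_\theta\in|\gamma|\}$; $S$ is closed by continuity of $\theta\mapsto\omega_\theta$, and I would prove by induction on $n$ that $\{k\theta_0/2^n:0\leq k\leq 2^n\}\subset S$. The inductive step considers $\theta_-,\theta_+\in S$ with $\theta_+-\theta_-=\theta_0/2^n$ and $\{\theta_-,\theta_+\}\neq\{0,\theta_0\}$ (which is automatic for $n\geq 1$ once we have also $\theta_0/2\in S$): the corresponding parameter interval on $\gamma$ is strictly inside $[0,1]$, so by Example \ref{geodisj2}\,\eqref{geodisj2a} the geodesic arc from $\omega_{\theta_-}$ to $\omega_{\theta_+}$ is unique and equals the sub-arc of $\gamma$; applying Lemma \ref{besonders} to the minimizing sub-curve $\mathcal{C}|[\theta_-,\theta_+]$ at its midpoint then places $\omega_{(\theta_-+\theta_+)/2}$ on $\gamma$. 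Closedness and density of $S$ yield $S=[0,\theta_0]$, and since $\omega_\theta$ is the unique point of $\gamma$ at ${\mathfrak d}_B$-distance $\theta$ from $\nu$, the normal parameterizations of $\mathcal{C}$ and $\gamma$ agree, so $|\mathcal{C}|=|\gamma|$.

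Part~(2) is then a direct consequence: by Theorem \ref{einga} the arc $\gamma$ attached to $\mathcal{C}$ in part~(1) is unique in $\mathcal{C}^{\nu,\varrho}(M)$ iff $\varrho_\nu$ and $\nu_\varrho$ are mutually disjoint, and in the non-disjoint case two distinct geodesic arcs themselves are distinct minimizing curves, so $|\mathcal{C}|$ cannot be unique. The principal obstacle I expect is the bisection/arc-determining bookkeeping: every application of Example \ref{geodisj2}\,\eqref{geodisj2a} must be to a \emph{proper} sub-interval, since the full arc need not be arc-determining (exactly the failure that part~(2) quantifies), and the induction must be organized so that this side-condition is never violated—choosing $\theta^*=\theta_0/2$ and bisecting strictly inside the already-established subset $S$ is what makes the argument go through.
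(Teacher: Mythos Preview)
Your proof is correct and follows essentially the same route as the paper: normal arclength parameterization, Lemma~\ref{besonders} to pin the midpoint $\omega_{\theta_0/2}$ onto some $\gamma\in\mathcal{C}^{\nu,\varrho}(M)$, then dyadic bisection using Example~\ref{geodisj2}\,\eqref{geodisj2a} on proper sub-arcs, with part~(2) deduced from Theorem~\ref{einga}. Two minor remarks: your phrase ``strictly inside $[0,1]$'' is slightly imprecise (for $\theta_-=0$ the sub-interval touches $0$), but what Example~\ref{geodisj2}\,\eqref{geodisj2a} actually needs is only $b-a<1$, which you do have; and your auxiliary uniqueness step (that $\gamma^*$ is determined by any single interior point $\omega_{\theta^*}$) is correct but not required for the bisection---the paper simply fixes $\gamma$ via the midpoint and proceeds.
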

\begin{proof}
In context of \eqref{gac1} note that by Theorem \ref{innergeodist}  each geodesic arc $\gamma\in {\mathcal{C}}^{\nu,\varrho}(M)$ is a special continuous curve of minimal Bures length connecting $\nu$ and $\varrho$. Also, provided \eqref{gac1} holds, then validity of \eqref{gac2} follows along with Theorem \ref{einga}.  It remains to be shown that if ${\mathcal{C}}$ is any continuous curve of minimal Bures length  connecting $\nu$ and $\varrho$, then up to the parameterization  ${\mathcal{C}}$ equals some $\gamma\in {\mathcal{C}}^{\nu,\varrho}(M)$. This will be the case if ${\mathcal{C}}$ and $\gamma$  can be shown to have the same normal parameterization. This we are going to do now.

By Theorem \ref{innergeodist} and  preliminary considerations we know that the Bures length of ${\mathcal{C}}$ reads    $\varTheta({\mathcal{C}})={\mathfrak d}_B(M|\nu,\varrho)$. We may assume the normal parameterization \eqref{Bpara} $${\mathcal{C}}: [0,\theta_0]\ni \theta\longmapsto \omega_\theta\in S(M)$$  to be given, with  $\theta_0=\varTheta({\mathcal{C}})={\mathfrak d}_B(M|\nu,\varrho)$ and with the parameter $\theta$ with $0\leq \theta\leq \theta_0$ having the meaning of the Bures length counted from the starting state $\omega_0=\nu$ along the curve to the state $\omega_\theta$ up to which a Bures length of $\theta$ has been evolved. Note that the  final state of ${\mathcal{C}}$ is $\varrho=\omega_{\theta_0}$.  What will be shown is that the midpoint $\omega=\omega_{\theta_0/2}$ of ${\mathcal{C}}$ equals the midpoint $\nu_{\theta_0/2}$ of some  geodesic arc $\gamma\in {\mathcal{C}}^{\nu,\varrho}(M)$. By Lemma \ref{geoarclength} the normal parameterization of this $\gamma$ can be written as
$$\gamma: [0,\theta_0]\ni \theta\longmapsto \nu_\theta\in S(M)$$
and will be shown to satisfy the needs of \eqref{gac1}.
In fact, by the assumptions on $\omega$ and since ${\mathcal{C}}$ is of minimal Bures length,
\[
{\mathfrak d}_B(M|\nu,\omega)={\mathfrak d}_B(M|\omega,\varrho)
\]
has to be fulfilled, and by
Theorem \ref{innergeodist} we have
\[
{\mathfrak d}_B(M|\nu,\omega)+{\mathfrak d}_B(M|\omega,\varrho)={\mathfrak d}_B(M|\nu,\varrho)
\]
Thus, the assumptions for an application of Lemma \ref{besonders} in the special version with $\lambda=1/2$ are given, see Remark \ref{refd6}. According to the latter, there is a geodesic arc $\gamma$  connecting $\nu$ with $\varrho$ and $\omega_{\theta_0/2}=\omega=\nu_{\theta_0/2}$, that is, $\omega$ is the midpoint of $\gamma$, too.
Now, to given $n\in {{\mathbb{N}}}$, let
$${\mathfrak B}_n=\left\{k\,\theta_0/2^n,\,0\leq k\leq 2^n\right\}$$
be the partition points of the binary equipartition of $n$-th order of $[0,\theta_0]$.  Then, by the above we are yet knowing that $\omega_\theta=\nu_\theta$ in fact is true for all $\theta\in {\mathfrak B}_1$.
Assume now we would know that $\omega_\theta=\nu_\theta$ were fulfilled for all $\theta\in {\mathfrak B}_n$. Then, let us consider those parts ${\mathcal{C}}_{n,k}$, $k=0,1,\ldots,2^n$, of the curve ${\mathcal{C}}$ connecting $\omega_{(k-1)\,\theta_0/2^n}$ with $\omega_{k\,\theta_0/2^n}$. Since ${\mathcal{C}}$ is connecting $\nu$ and $\omega$ with minimal Bures length, application of Theorem \ref{innergeodist}, especially according to formula \eqref{geocon} with $\tau={\mathfrak B}_n$,  will give that each ${\mathcal{C}}_{n,k}$ is a continuous curve of minimal Bures length $\theta_0/2^n$ and connecting $\omega_{(k-1)\,\theta_0/2^n}=\nu_{(k-1)\,\theta_0/2^n}$ and $\omega_{k\,\theta_0/2^n}=\nu_{k\,\theta_0/2^n}$. On the other hand, in view of Example \ref{geodisj2}\,\eqref{geodisj2a}, both $|\gamma|[0,\theta_0/2]|$ and $|\gamma|[\theta_0/2,\theta_0]|$ have to be arc-determining sets. Accordingly, especially also the geodesic arc ${\gamma}_{n,k}$ connecting the above mentioned two states has to be uniquely determined.
Hence, when applied to both the cases  ${\mathcal{C}}_{n,k}$ and $\gamma|[(k-1)\,\theta_0/2^{n},k\,\theta_0/2^n]$ of continuous curves which are connecting the same two states, by literally the same conclusions by means of  Lemma \ref{besonders} and which were used to show that $\omega_\theta=\nu_\theta$ on ${\mathfrak B}_1$ is true, both the midpoint $\omega_{(2k-1)\,\theta_0/2^{n+1}}$ of ${\mathcal{C}}_{n,k}$ and the midpoint $\nu_{(2k-1)\,\theta_0/2^{n+1}}$  of $\gamma|[(k-1)\,\theta_0/2^{n},k\,\theta_0/2^n]$ have to equal the midpoint of the (now uniquely determined)  geodesic arc ${\gamma}_{n,k}$ connecting  $\nu_{(k-1)\,\theta_0/2^n}$ and $\nu_{k\,\theta_0/2^n}$. Thus, by transitivity  we conclude that  $\omega_{(2k-1)\,\theta_0/2^{n+1}}=\nu_{(2k-1)\,\theta_0/2^{n+1}}$ has to be fulfilled, for all integers $k$ with $0\leq k\leq 2^n$. That is,  whenever for $n\in {{\mathbb{N}}}$ one has $\omega_\theta=\nu_\theta$ for all $\theta\in {\mathfrak B}_n$, then $\omega_\theta=\nu_\theta$ for all $\theta\in {\mathfrak B}_{n+1}$ follows. Since $\omega_\theta=\nu_\theta$ on ${\mathfrak B}_1$ holds, by the induction principle $\omega_\theta=\nu_\theta$ has to hold on all binary partition points  of $[0,\theta_0]$, that is, for all
$$\textstyle \theta\in \bigcup_n {\mathfrak B}_n$$
Thus, ${\mathcal{C}}$ and $\gamma$ have in common a dense set of points. By continuity then  $\omega_\theta=\nu_\theta$ for any $\theta\in [0,\theta_0]$ follows, that is, the normal parameterizations are the same.
\end{proof}
Within the above proof implicitly we have made access to  the  re-parameterization schema \eqref{Bpara} in context of curves of minimal Bures lenght. Now, in making use of Theorem \ref{gearcuni}, in this special case this procedure can be made more explicit. Suppose a curve ${\mathcal{C}}$ of minimal Bures length and connecting two given states $\omega_0=\nu$ and $\omega_1=\varrho$ is given by a continuous map  ${\mathcal{C}}: [0,1]\ni t\longmapsto \omega_t\in {\mathcal S}(M)$ . Let \begin{subequations}\label{para}
\begin{equation}\label{para1}
\theta_0={\mathfrak d}_B(M|\nu,\varrho)
\end{equation}
 The reparameterization $\theta:[0,1]\ni t\longmapsto \theta(t)\in [0,\theta_0]$ of \eqref{Bpara1} is given by
\begin{equation}\label{para2}
\theta(t)={\mathfrak d}_B(M|\nu,\omega_t)
\end{equation}
With the inverse function $t(\theta)$ of \eqref{para2}, the normal parameterization of ${\mathcal{C}}$ reads
\begin{equation}\label{para3}
{\mathcal{C}}: [0,\theta_0]\ni \theta\longmapsto \nu_\theta=\omega_{t(\theta)}
\end{equation}
\end{subequations}
More precisely, with respect to \eqref{para3} the following is fulfilled.
\begin{corolla}\label{normalpara}
There is a $^*$-representation $\{\pi,{\mathcal H}_\pi\}$ of $M$,  $\varphi\in {\mathcal S}_{\pi,M}(\nu|\varrho)$,  $\zeta\in {\mathcal S}_{\pi,M}(\varrho)$ with $F(M|\nu,\varrho)=\langle\zeta,\varphi\rangle_\pi$, with normal parameterization of ${\mathcal{C}}$ reading
\begin{equation}\label{para4}
\nu_\theta=\biggl(\frac{\sin \theta\phantom{_0}}{\sin \theta_0}\biggr)^2\,\varrho+\biggl(\frac{\sin (\theta_0-\theta)}{\sin \theta_0}\biggr)^2\,\nu+\frac{2\sin \theta \sin (\theta_0-\theta)}{\sin^2 \theta_0}\,\re f^\pi_{\zeta,\varphi}
\end{equation}
\end{corolla}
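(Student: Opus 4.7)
The plan is to reduce the statement to a fact about geodesic arcs already in hand and then to perform an explicit trigonometric reparameterization. By Theorem \ref{gearcuni}\,\eqref{gac1}, there exists a geodesic arc $\gamma\in {\mathcal{C}}^{\nu,\varrho}(M)$ such that $\gamma$ and $\mathcal{C}$ share the same supporting set and the same normal parameterization. By Definition \ref{allg2c0} and the construction schema \eqref{geo0a}, $\gamma$ in turn is implemented by some family $[0,1]\ni t\longmapsto\varphi_t=t\,\zeta+\lambda(t)\,\varphi$, with $\lambda(t)=-tF+\sqrt{1-t^2(1-P)}$, for $P=P(M|\nu,\varrho)$, $F=F(M|\nu,\varrho)$, and $\varphi\in{\mathcal S}_{\pi,M}(\nu)$, $\zeta\in{\mathcal S}_{\pi,M}(\varrho)$ chosen so that $\langle\zeta,\varphi\rangle_\pi=F$ (i.e., $h^\pi_{\zeta,\varphi}\geq{\mathsf 0}$). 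By Lemma \ref{nv} and the freedom in the choice of the mutual implementing vectors, $\varphi$ can additionally be taken to lie in ${\mathcal S}_{\pi,M}(\nu|\varrho)$, as claimed.

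Next, I would record the explicit $t$-parameterization of $\nu_t$ obtained from \eqref{geo0c}, which in view of the identity $\lambda(t)^2=1-t^2-2t\lambda(t)F$ can equivalently be written
\begin{equation*}
\nu_t=t^2\,\varrho+\lambda(t)^2\,\nu+2\,t\,\lambda(t)\,\re f^\pi_{\zeta,\varphi}\,.
\end{equation*}
The key step is then to identify the parameter change $t\mapsto\theta$ that turns this into the normal parameterization of $\gamma$. Writing $F=\cos\theta_0$ and $\sqrt{1-P}=\sin\theta_0$ (which by \eqref{geodist} is consistent with $\theta_0={\mathfrak d}_B(M|\nu,\varrho)$), I would introduce
\begin{equation*}
t(\theta)=\frac{\sin\theta}{\sin\theta_0}\,,\qquad \theta\in[0,\theta_0]\,,
\end{equation*}
and verify by elementary trigonometry that $\lambda(t(\theta))=\sin(\theta_0-\theta)/\sin\theta_0$, as a direct application of the addition theorem for $\sin$.

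It then remains to confirm that $\theta$ so introduced really is the Bures arc-length counted from $\nu$ along $\gamma$. For this I would use Corollary \ref{porter3}\,\eqref{porter4b}, which gives $h^\pi_{\varphi_t,\varphi}\geq{\mathsf 0}$, so that by Lemma \ref{bas3}\,\eqref{bas3aa} and Theorem \ref{bas5}\,\eqref{bas53},
\begin{equation*}
F(M|\nu,\nu_t)=\langle\varphi_t,\varphi\rangle_\pi=tF+\lambda(t)\,.
\end{equation*}
Substituting $t=t(\theta)$ collapses this via the addition theorem to $\cos\theta$, whence by \eqref{geodist}
${\mathfrak d}_B(M|\nu,\nu_{t(\theta)})=\arccos\cos\theta=\theta$, which is precisely the defining property \eqref{Bpara4} of the normal parameterization. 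Finally, inserting $t(\theta)$ and $\lambda(t(\theta))$ into the displayed formula for $\nu_t$ yields \eqref{para4}. No serious obstacle is expected; the only thing to keep an eye on is a careful use of $h^\pi_{\zeta,\varphi}\geq{\mathsf 0}$ and of the non-negativity of $t$ and $\lambda(t)$ on $[0,1]$ (needed to extract unambiguous square roots in the trigonometric substitution), both of which are already guaranteed in the setup of \eqref{geo0}.
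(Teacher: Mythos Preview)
Your proposal is correct and follows essentially the same route as the paper: reduce to a geodesic arc via Theorem \ref{gearcuni}, use the standard implementation \eqref{geo0a}, identify $t(\theta)=\sin\theta/\sin\theta_0$ and $\lambda(t(\theta))=\sin(\theta_0-\theta)/\sin\theta_0$, and substitute. The only difference is in how the arc-length property is verified: the paper obtains $\theta(t)=\arcsin\bigl(t\sqrt{1-P}\bigr)$ by integrating the dilation function from Example \ref{ex2}\,\eqref{geo2b} and then inverts, whereas you postulate $t(\theta)$ and confirm it via $F(M|\nu,\nu_t)=tF+\lambda(t)=\cos\theta$ together with \eqref{geodist}; both arguments rest on the same positivity $h^\pi_{\varphi_t,\varphi}\geq{\mathsf 0}$ and are equivalent, your check being marginally more direct.
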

\begin{proof}
By Theorem \ref{gearcuni} there is $\gamma\in {\mathcal{C}}^{\nu,\varrho}(M)$ with $|{\mathcal{C}}|=|\gamma|$. Clearly, both $\gamma$ and ${\mathcal{C}}$ then have to possess the same normal parameterizations. It suffices to arrange for the normal parametrization of  $\gamma$. Let $\gamma: [0,1]\ni t\longmapsto\omega_t\in {\mathcal S}(M)$ be given by
\begin{equation}\label{para5}
\omega_t=t^2\,\varrho+\lambda(t)^2\nu+ 2\,t\lambda(t)\re f^\pi_{\zeta,\varphi}
\end{equation}
with some $\varphi\in {\mathcal S}_{\pi,M}(\nu|\varrho)$, $\zeta\in  {\mathcal S}_{\pi,M}(\varrho)$ satisfying  $F(M|\nu,\varrho)=\langle\zeta,\varphi\rangle_\pi$, see \eqref{einga1}, \eqref{geo0} and Definition \ref{allg2c0}. Also, in line with \eqref{geo2},  the reparameterization function $\theta(t)$ of $\gamma$ given in accordance with \eqref{Bpara1} and with $P=P(M|\nu,\varrho)$ reads
 \begin{eqnarray}\nonumber
\theta(t) & =& \int_0^t  {\mathop{\mathrm{dil}}}_s^B \gamma\, d\/s
 =  \int_0^t \sqrt{\frac{1-P}{1-s^2(1-P)}}\ d\/s\\
 \nonumber
 &=&\arcsin t\sqrt{1-P}
\end{eqnarray}
Especially, in view of \eqref{para1} and with $F=F(M|\nu,\varrho)$ we then have $$\theta_0=\arcsin \sqrt{1-P}=\arccos F$$ and thus in the case of the geodesic arc $\gamma$ the inverse to \eqref{para2} function is
\begin{subequations}\label{para6}
\begin{equation}\label{para6a}
t(\theta)=\frac{\sin \theta\phantom{_0}}{\sin \theta_0}
\end{equation}
Inserting the latter into the formula of $\lambda(t)$ (see \eqref{geo0a}) and using that $F=\cos \theta_0$ holds, in the result of some trigonometric calculations we arrive at the expression
\begin{equation}\label{para6b}
\lambda(t(\theta))=\frac{\sin \,(\theta_0-\theta)}{\sin \theta_0}
\end{equation}
\end{subequations}
Applying formulae \eqref{para6} to  formula \eqref{para5} in view of \eqref{para3} then yields  \eqref{para4}.
\end{proof}
A closed Bures rectifiable  continuous curve ${\mathcal{C}}$ extending with minimal Bures length between its boundary states/endpoints will be termed `shortest path'.
We are going to discuss how the supporting sets of any two shortest paths can be correlated with each other. Assume that ${\mathcal{C}}$,  $\hat{{\mathcal{C}}}$ and $\hat{{\mathcal{C}}}^{\,\prime}$ can stand for shortest paths.
\begin{lemma}\label{bifurc}
For any two shortest paths ${\mathcal{C}}$ and $\hat{{\mathcal{C}}}$ the following can happen:
\begin{itemize}
\item[(a)] \label{bif1}
$|{\mathcal{C}}|\cap |\hat{{\mathcal{C}}}|=\emptyset$;
\item[(b)] \label{bif2}
$|{\mathcal{C}}|\cap |\hat{{\mathcal{C}}}|$ is a one-point set;
\item[(c)] \label{bif3}
$|{\mathcal{C}}|\cap |\hat{{\mathcal{C}}}|=\{\nu,\varrho\}$, with common boundary states $\nu$, $\varrho$;
\item[(d)] \label{bif4}
$ |{\mathcal{C}}|\subset |\hat{{\mathcal{C}}}|$ or $ |\hat{{\mathcal{C}}}|\subset |{\mathcal{C}}|$;
\item[(e)] \label{bif5}
$|{\mathcal{C}}|\cap |\hat{{\mathcal{C}}}|\supset |\hat{{\mathcal{C}}}^{\,\prime}|$, with $\hat{{\mathcal{C}}}^{\,\prime}$ having as the one endpoint one of the endpoints of ${\mathcal{C}}$, whereas the other endpoint is one of the endpoints of $\hat{{\mathcal{C}}}$, respectively.
\end{itemize}
Thereby, in all cases where  $|{\mathcal{C}}|\cap |\hat{{\mathcal{C}}}|$ consists of at least three states one of which is a common boundary state, then only \textup{(d)} can occur.
\end{lemma}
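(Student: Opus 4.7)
The strategy is to reduce to geodesic arcs with normal parametrisation via Theorem \ref{gearcuni} and exploit the arc-determining property of proper sub-arcs from Example \ref{geodisj2}\,\eqref{geodisj2a}. Both $\mathcal{C}$ and $\hat{\mathcal{C}}$ shall be parametrised on $[0,\theta_\mathcal{C}]$ and $[0,\theta_{\hat{\mathcal{C}}}]$ normally; when they share a boundary state $\nu$, the rigidity that drives the whole argument is that every common state $\omega\in|\mathcal{C}|\cap|\hat{\mathcal{C}}|$ necessarily sits at the same parameter value ${\mathfrak d}_B(M|\nu,\omega)$ on both curves, since the normal parametrisation of a shortest path from $\nu$ is precisely the Bures-geodesic distance from $\nu$.

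Realisability of (a)--(e) will be indicated briefly: (a)--(c) by generic configurations (disjoint geodesic arcs, two arcs through a single common state, two distinct geodesic arcs between the same endpoints as allowed by Theorem \ref{einga}); (d) by taking $\hat{\mathcal{C}}$ as a proper sub-arc of $\mathcal{C}$; (e) by linking two geodesic arcs written out via Corollary \ref{normalpara} so that a common intermediate segment terminates at endpoints of the respective full paths. The substantive content of the lemma is the final uniqueness assertion.

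To prove it, assume $|\mathcal{C}|\cap|\hat{\mathcal{C}}|\supset\{\nu,\omega_1,\omega_2\}$ with $\nu$ a common boundary; set $\theta_i:={\mathfrak d}_B(M|\nu,\omega_i)$ with $\theta_1<\theta_2$, and arrange $\theta_\mathcal{C}\leq\theta_{\hat{\mathcal{C}}}$. If $\theta_2<\theta_\mathcal{C}$, then after rescaling to $[0,1]$ the restriction $\mathcal{C}|[0,\theta_2]$ is a proper sub-arc of a geodesic arc and hence arc-determining by Example \ref{geodisj2}\,\eqref{geodisj2a}; the shortest path from $\nu$ to $\omega_2$ is therefore unique, yielding $|\mathcal{C}|[0,\theta_2]|=|\hat{\mathcal{C}}|[0,\theta_2]|$. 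The borderline case $\theta_2=\theta_\mathcal{C}$ (forcing $\theta_\mathcal{C}<\theta_{\hat{\mathcal{C}}}$ if the two curves differ) is settled analogously: apply the same argument inside the proper sub-arc $\hat{\mathcal{C}}|[0,\theta_\mathcal{C}]$ of $\hat{\mathcal{C}}$ using $\omega_1$ as the intermediate common state, so $\mathcal{C}=\hat{\mathcal{C}}|[0,\theta_\mathcal{C}]$ as supporting sets, which is case (d).

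The main obstacle will be the generic regime $\theta_2<\theta_\mathcal{C}$, in which the previous step only yields coincidence up to $\omega_2$, and one must exclude a Y-shape where both curves agree up to some interior state $\omega^*=\mathcal{C}(\theta^*)$ and diverge thereafter. I would introduce
\[
\theta^*:=\sup\bigl\{\theta\leq\theta_\mathcal{C}:\mathcal{C}(\theta')=\hat{\mathcal{C}}(\theta')\text{ for all }\theta'\in[0,\theta]\bigr\},
\]
use $\theta^*\geq\theta_2$ from the arc-determining step, continuity to preserve coincidence at $\theta^*$, and argue by contradiction assuming $\theta^*<\theta_\mathcal{C}$. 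The crux is to invoke Corollary \ref{normalpara}: in a common representation (supplied by Lemma \ref{king}) and with a common implementing vector $\varphi\in\mathcal{S}_{\pi,M}(\nu)$, formula \eqref{para4} expresses $\mathcal{C}(\theta)$ and $\hat{\mathcal{C}}(\theta)$ explicitly in terms of $\varphi$ and the respective $\zeta$-data; coincidence of the state-paths on $[0,\theta^*]$ pins down these $\zeta$-data sufficiently---via the rigidity of \eqref{geo0a}, in which $\zeta$ is determined once $\varphi$ and any interior state are fixed---to force agreement slightly past $\theta^*$, contradicting maximality. Hence $\theta^*=\theta_\mathcal{C}$ and $|\mathcal{C}|\subset|\hat{\mathcal{C}}|$, establishing case (d).
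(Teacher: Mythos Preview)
Your overall strategy---normal parametrisation via Theorem \ref{gearcuni}, the arc-determining property of proper sub-arcs (Example \ref{geodisj2}\,\eqref{geodisj2a}), and the explicit formula of Corollary \ref{normalpara}---matches the paper's. Your first reduction, obtaining $|\mathcal{C}|[0,\theta_2]|=|\hat{\mathcal{C}}|[0,\theta_2]|$ from a single common interior state, is exactly the paper's opening move.

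The gap is in your ``crux'' step. You invoke the rigidity of \eqref{geo0a}, claiming that $\zeta$ is determined by $\varphi$ and an interior \emph{state}. But \eqref{geo0a} is a relation between implementing \emph{vectors}: from $\varphi$ and the state $\nu_t$ alone you only know a fibre ${\mathcal S}_{\pi,M}(\nu_t)$, not the particular $\varphi_t$, and moreover the coefficients $t,\lambda(t)$ in \eqref{geo0a} depend on the endpoint through $P(M|\nu,\varrho)$. Since $\mathcal{C}$ and $\hat{\mathcal{C}}$ have different endpoints, the two instances of formula \eqref{para4} carry different $\theta_0$-values, so the trigonometric coefficients do not match and your appeal to rigidity does not go through as written. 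The paper sidesteps this by comparing $\mathcal{C}$ not with $\hat{\mathcal{C}}$ itself but with the sub-arc $\hat{\mathcal{C}}'$ of $\hat{\mathcal{C}}$ having the \emph{same} length $\theta_0=\theta_{\mathcal{C}}$, terminating at some $\hat\omega\in|\hat{\mathcal{C}}|$. Now both formulae \eqref{para4} share identical coefficients, and state-level coincidence on the sub-interval $[0,\vartheta]$ yields directly
\[
\hat\omega-\varrho=\frac{\sin(\theta_0-\theta)}{\sin\theta}\,\re(f-\hat f)
\]
for all $\theta$ in an open interval; since the left side is constant and the right side varies, both vanish, giving $\hat\omega=\varrho$ and hence $|\mathcal{C}|=|\hat{\mathcal{C}}'|\subset|\hat{\mathcal{C}}|$ in one stroke. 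Your $\theta^*$ supremum and the attempt to push ``slightly past'' it are then unnecessary.
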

\begin{proof}
Let ${\mathcal{C}}$ and $\hat{{\mathcal{C}}}$ be evolving from $\nu$ to $\varrho$, and from $\hat{\nu}$ to $\hat{\varrho}$, respectively.
To be non-trivial, we may suppose that $|{\mathcal{C}}|\cap |\hat{{\mathcal{C}}}|$ consists of at least two states. Under this supposition we are going to show that then one of the cases (c)--(e)  occurs.

Start with the case that ${\mathcal{C}}$ and $\hat{{\mathcal{C}}}$ have one of the  endpoints in common, and have yet in common another state  $\omega\in   |{\mathcal{C}}|\cap |\hat{{\mathcal{C}}}|$ which however is not a common endpoint.

We proceed the case with  $\nu=\hat{\nu}$ (with $\varrho=\hat{\varrho}$ procedures will run  accordingly). Assume first  $\theta_0={\mathfrak d}_B(M|\nu,\varrho)\leq {\mathfrak d}_B(M|\nu,\hat{\varrho})=\theta_1$ to hold. Let $\hat{\omega}$ be the uniquely determined state on  $\hat{{\mathcal{C}}}$ such that ${\mathfrak d}_B(M|\nu,\hat{\omega})=\theta_0$.  Let $\hat{{\mathcal{C}}}^{\,\prime}$ be the part of $\hat{{\mathcal{C}}}$ extending from $\nu$ to $\hat{\omega}$. By Theorem \ref{gearcuni} and Example \ref{geodisj2}\,\eqref{geodisj2a},  $\hat{{\mathcal{C}}}^{\,\prime}$ is minimizing the Bures length between $\nu$ and $\hat{\omega}$. By  Corollary \ref{normalpara} the normal parameterization $[0,\theta_1]\ni\theta\longmapsto \hat{\nu}_\theta$ of $\hat{{\mathcal{C}}}$ for $\theta \in [0,\theta_0]$ with a $\hat{f}\in {\mathfrak F}(M|\nu,\hat{\omega})$ then equally well can be given as
\begin{subequations}\label{teilbogen}
\begin{equation}\label{teilbogen1}
\hat{\nu}_\theta=\biggl(\frac{\sin \theta\phantom{_0}}{\sin \theta_0}\biggr)^2\,\hat{\omega}+\biggl(\frac{\sin (\theta_0-\theta)}{\sin \theta_0}\biggr)^2\,\nu+\frac{2\sin \theta \sin (\theta_0-\theta)}{\sin^2 \theta_0}\,\re \hat{f}
\end{equation}
whereas the normal parameterization $[0,\theta_0]\ni \theta\longmapsto\nu_\theta$ of ${\mathcal{C}}$ reads
\begin{equation}\label{teilbogen2}
\nu_\theta=\biggl(\frac{\sin \theta\phantom{_0}}{\sin \theta_0}\biggr)^2\,\varrho+\biggl(\frac{\sin (\theta_0-\theta)}{\sin \theta_0}\biggr)^2\,\nu+\frac{2\sin \theta \sin (\theta_0-\theta)}{\sin^2 \theta_0}\,\re f
\end{equation}
\end{subequations}
for some $f\in {\mathfrak F}(M|\nu,\varrho)$. Now, by assumption on $\omega$, for some $\vartheta$ with $0<\vartheta\leq  \theta_0$, $\omega=\nu_\vartheta=\hat{\nu}_\vartheta$ has to be fulfilled. Once more again, by means of Theorem \ref{gearcuni} and Example \ref{geodisj2}\,\eqref{geodisj2a} we conclude that, since by assumption $\omega$ cannot be an endpoint of $\hat{\mathcal{C}}$ and thus has to be an inner point of a geodesic arc $\gamma$ with $|\gamma|=|\hat{\mathcal{C}}|$,  the supporting sets of those  parts of ${\mathcal{C}}$ and $\hat{{\mathcal{C}}}^{\,\prime}$  extending between $\nu$ and $\omega$ have to equal the supporting set of the uniquely determined geodesic arc $\gamma\in {\mathcal{C}}^{\nu,\omega}(M)$. Hence, for all $\theta$ with $0< \theta\leq \vartheta$,  $\nu_\theta=\hat{\nu}_\theta$ has to hold.  In view of \eqref{teilbogen} this  means that
\begin{equation}\label{bogengl}
\hat{\omega}-\varrho=\frac{\sin(\theta_0-\theta)}{\sin \theta}\,\re\,(f-\hat{f})
\end{equation}
for any $0<\theta<\vartheta\leq \theta_0$. From this $\hat{\omega}=\varrho$ and $\re f=\re \hat{f}$ follow. That is, $|{\mathcal{C}}|=|\hat{{\mathcal{C}}}^{\,\prime}|\subset |\hat{{\mathcal{C}}}|$ is seen, provided ${\mathfrak d}_B(M|\nu,\varrho)\leq {\mathfrak d}_B(M|\nu,\hat{\varrho})$ has been supposed. It is plain to see that in case of ${\mathfrak d}_B(M|\nu,\varrho)\geq {\mathfrak d}_B(M|\nu,\hat{\varrho})$ by tacitly the same arguments but with  interchanging  the r\^{o}les of ${\mathcal{C}}$ and
$\hat{{\mathcal{C}}}$ we will end up with $|{\mathcal{C}}|\supset |\hat{{\mathcal{C}}}|$.  Hence, if $|{\mathcal{C}}|\cap |\hat{{\mathcal{C}}}|$ contains a common endpoint and another state which is not a common endpoint, one arrives at (d).
Now, suppose ${\mathcal{C}}$ and $\hat{{\mathcal{C}}}$ have the same endpoints in common, i.e.  $\nu=\hat{\nu}$ and $\varrho=\hat{\varrho}$ hold. Then, in setting $\hat{\omega}=\varrho$ in \eqref{teilbogen1} and \eqref{bogengl}, in \eqref{teilbogen} the normal parameterizations of ${\mathcal{C}}$ and $\hat{{\mathcal{C}}}$ are given. Now, suppose $\nu_\theta=\hat{\nu}_{\theta'}$, for some $\theta,\theta'\in ]0,\theta_0[$. Then, owing to $\theta={\mathfrak d}_B(M|\nu,\nu_\theta)$ and $\theta'={\mathfrak d}_B(M|\nu,\hat{\nu}_{\theta'})$, we  conclude that $\theta=\theta'$. Hence, from \eqref{bogengl}
$0=\re\,(f-\hat{f})$ follows, and then \eqref{teilbogen} yield identical normal parameterizations.  Thus, if ${\nu,\varrho}\in  |{\mathcal{C}}|\cap|\hat{{\mathcal{C}}}|$, then always either $|{\mathcal{C}}|=|\hat{{\mathcal{C}}}|$, and which is a special case of (d) again, or  $\{\nu,\varrho\}=  |{\mathcal{C}}|\cap|\hat{{\mathcal{C}}}|$, which is case (c). Especially, this implies that if $ |{\mathcal{C}}|\cap|\hat{{\mathcal{C}}}|$ consist of at least three states, one of which is a common boundary state, then only (d) can occur.

Finally, let us consider those cases with at least two states in $|{\mathcal{C}}|\cap|\hat{{\mathcal{C}}}|$ which are not yet covered by the above special cases. In line with this, assume $\omega_1,\omega_2\in|{\mathcal{C}}|\cap|\hat{{\mathcal{C}}}|$ such that $\omega_1\not=\omega_2$, but now  both being inner states in respect of ${\mathcal{C}}$ and $\hat{{\mathcal{C}}}$, respectively. To start with, let us consider those connected components ${\mathcal{C}}_1$ and $\hat{{\mathcal{C}}}_1$ of ${\mathcal{C}}$ and $\hat{{\mathcal{C}}}$, respectively, possessing $\omega_1$ as an endpoint and containing $\omega_2$ as an inner point. Correspondingly, let ${\mathcal{C}}_2$ and $\hat{{\mathcal{C}}}_2$ be the connected components of ${\mathcal{C}}$ and $\hat{{\mathcal{C}}}$, respectively, possessing $\omega_2$ as an endpoint and containing $\omega_1$ as an inner point. Then, the continuous curves ${\mathcal{C}}_i$ and $\hat{{\mathcal{C}}}_i$ for $i=1,2$ are minimizing the Bures length between their respective endpoints, and reproduce the type of situation among two shortest paths with more than one point in common which has been analyzed above. That is, for ${\mathcal{C}}_i, \hat{{\mathcal{C}}}_i$ with $i=1,2$ we are in case (d). That is, we have $ |{\mathcal{C}}_1|\subset |\hat{{\mathcal{C}}}_1|$ or $ |\hat{{\mathcal{C}}}_1|\subset |{\mathcal{C}}_1|$, and $|{\mathcal{C}}_2|\subset |\hat{{\mathcal{C}}}_2|$ or $ |\hat{{\mathcal{C}}}_2|\subset |{\mathcal{C}}_2|$. Thus, if e.g.
$|{\mathcal{C}}_1|\subset |\hat{{\mathcal{C}}}_1|$ and $ |\hat{{\mathcal{C}}}_2|\subset |{\mathcal{C}}_2|$ happens, we have
$|{\mathcal{C}}|\cap|\hat{{\mathcal{C}}}|=|{\mathcal{C}}_1|\cup |\hat{{\mathcal{C}}}_2| $.
Thereby, in view of our standard argumentation by means of Theorem \ref{gearcuni} and Example \ref{geodisj2}\,\eqref{geodisj2a}, we can be sure that $|{\mathcal{C}}_1|\cup |\hat{{\mathcal{C}}}_2| $ has to be the supporting set $|\hat{{\mathcal{C}}}^{\,\prime}|$ of some shortest path $\hat{{\mathcal{C}}}^{\,\prime}$ with the one endpoint being endpoint of ${\mathcal{C}}$, whereas the other endpoint is an endpoint of $\hat{{\mathcal{C}}}$. That is, then we end up in case (e). On the other hand, if e.g.
$|{\mathcal{C}}_1|\subset |\hat{{\mathcal{C}}}_1|$ and $ |{\mathcal{C}}_2|\subset |\hat{{\mathcal{C}}}_2|$ happens, then  we have
$|{\mathcal{C}}|\cap|\hat{{\mathcal{C}}}|=|{\mathcal{C}}_1|\cup |{\mathcal{C}}_2| $, with the set on the right-hand side being the supporting set of ${\mathcal{C}}$, that is, we will end up in case (d). And analogously, also for the remaining other two combinations one is arriving at either (e) or (d).
\end{proof}
\begin{remark}\label{simpcu}
From Theorem \ref{gearcuni} it follows that each shortest path in ${\mathcal S}(M)$ has to be a simple curve for, each geodesic arc $\gamma$ is so, by Corollary \ref{auxin1}.
\end{remark}
In the following two notions relating topological pecularities of points in metrical spaces are adapted to the case of states under the Bures metric, refer to \cite[\S 19.]{Rino:61}.
\begin{definition}\label{bipunkt}
A state $\omega\in {\mathcal S}(M)$ is termed
\begin{enumerate}
\item\label{pass}
`transition state' if there exists a shortest path ${\mathcal{C}}$ passing through $\omega$;
\item \label{bipunkt0}
`bifurcation state' if it is  endpoint of three shortest paths ${\mathcal{C}}_1$, ${\mathcal{C}}_2$ and ${\mathcal{C}}_3$ such that the following properties hold:
\begin{enumerate}
\item \label{noexbi1}
$|{\mathcal{C}}_1|\cup |{\mathcal{C}}_2|$ and $|{\mathcal{C}}_1|\cup |{\mathcal{C}}_3|$ are supporting sets  of shortest paths;
\item \label{noexbi2}
  $\omega\not\in|\gamma|$, for any shortest path  $\gamma$ obeying  $|\gamma|\subset |{\mathcal{C}}_2|\cap |{\mathcal{C}}_3|$.
\end{enumerate}
\end{enumerate}
\end{definition}
Relating properties of the set of all transition states, let $\omega\in {\mathcal S}(M)$, and take some $\varrho\not=\omega$. Consider $\gamma\in {\mathcal{C}}^{\omega,\varrho}(M)$. According to Theorem \ref{innergeodist} $\gamma$ is a shortest path. By  Corollary \ref{auxin1} and in view of Definition \ref{bipunkt}\,\eqref{pass}, each inner state of $\gamma$ is a transition state. Hence, since $\omega$ is endpoint of $\gamma$,  $\omega$ can be arbitrarily well approximated by transition states. Thus we have denseness of the set of transition states within ${\mathcal S}(M)$ and, since this especially also applies if $\omega$ is a transition state by itself, in addition we infer that the set of transition states cannot admit any  isolated points. These are general properties of metric spaces where any two points can be connected by at least one shortest path. More specific is the following.
\begin{corolla}\label{nobipass}
The set of bifurcation states is empty.
\end{corolla}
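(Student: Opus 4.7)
\emph{Proof proposal.} I would argue by contradiction. Suppose $\omega\in{\mathcal{S}}(M)$ were a bifurcation state, witnessed by three shortest paths ${\mathcal{C}}_1,{\mathcal{C}}_2,{\mathcal{C}}_3$ from $\omega$ to other endpoints $\nu_1,\nu_2,\nu_3$, respectively, and obeying \eqref{noexbi1} and \eqref{noexbi2} of Definition \ref{bipunkt}. By \eqref{noexbi1} there exist shortest paths ${\mathcal{C}}_{12}$ and ${\mathcal{C}}_{13}$ whose supporting sets are $|{\mathcal{C}}_1|\cup|{\mathcal{C}}_2|$ and $|{\mathcal{C}}_1|\cup|{\mathcal{C}}_3|$, respectively. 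Each of these supporting sets is a simple arc by Remark \ref{simpcu}; accordingly ${\mathcal{C}}_{1j}$ necessarily runs from $\nu_1$ through $\omega$ to $\nu_j$, with $|{\mathcal{C}}_1|$ as the sub-arc on the $\nu_1$-side of $\omega$ and $|{\mathcal{C}}_j|$ as the sub-arc on the $\nu_j$-side, for $j=2,3$.

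Next I would apply Lemma \ref{bifurc} to the pair ${\mathcal{C}}_{12},{\mathcal{C}}_{13}$. These two shortest paths share $\nu_1$ as a common boundary state, and moreover the entire sub-arc $|{\mathcal{C}}_1|$ lies in $|{\mathcal{C}}_{12}|\cap|{\mathcal{C}}_{13}|$; since ${\mathcal{C}}_1$ is a non-trivial shortest path from $\omega$ to $\nu_1$, its support consists of infinitely many states. The final clause of Lemma \ref{bifurc} therefore forces case (d) to occur, i.e.\ one support is contained in the other. Without loss of generality, $|{\mathcal{C}}_{12}|\subset|{\mathcal{C}}_{13}|$.

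To extract the needed information I would parametrize the simple arc ${\mathcal{C}}_{13}$ by its Bures length (Remark \ref{simpcu}), placing $\nu_1$ at $0$, $\omega$ at some $\theta_\omega\in\,]0,\varTheta[{\mathcal{C}}_{13}]]$ and $\nu_3$ at $\varTheta[{\mathcal{C}}_{13}]$; then $|{\mathcal{C}}_1|$ corresponds to $[0,\theta_\omega]$ and $|{\mathcal{C}}_3|$ to $[\theta_\omega,\varTheta[{\mathcal{C}}_{13}]]$. Since $|{\mathcal{C}}_{12}|$ is a connected subset of this arc containing $|{\mathcal{C}}_1|$, under the parametrizing homeomorphism it must correspond to a subinterval of the form $[0,\theta_2]$ with $\theta_\omega\leq\theta_2\leq\varTheta[{\mathcal{C}}_{13}]$. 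The state $\nu_2$ sits at parameter $\theta_2$, and $|{\mathcal{C}}_2|$, corresponding to $[\theta_\omega,\theta_2]$, is contained in the parameter range $[\theta_\omega,\varTheta[{\mathcal{C}}_{13}]]$ of $|{\mathcal{C}}_3|$. Hence $|{\mathcal{C}}_2|\subset|{\mathcal{C}}_3|$, so $|{\mathcal{C}}_2|\cap|{\mathcal{C}}_3|=|{\mathcal{C}}_2|$, and ${\mathcal{C}}_2$ itself is a shortest path contained in $|{\mathcal{C}}_2|\cap|{\mathcal{C}}_3|$ whose support does contain $\omega$. This contradicts \eqref{noexbi2} of Definition \ref{bipunkt}, completing the proof. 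The only delicate point is the verification that Lemma \ref{bifurc}'s last clause actually applies, which rests on $|{\mathcal{C}}_1|$ having infinitely many states (hence more than three in $|{\mathcal{C}}_{12}|\cap|{\mathcal{C}}_{13}|$, one of which, $\nu_1$, is a common endpoint); everything else is a straightforward use of the simple-arc structure of shortest paths.
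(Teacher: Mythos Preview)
Your argument is correct and reaches the same contradiction with \eqref{noexbi2}, but through a different application of Lemma~\ref{bifurc} than the paper's. The paper applies that lemma directly to the pair $({\mathcal C}_2,{\mathcal C}_3)$: since $\omega$ is a common endpoint, and since (taking a shortest path $\gamma$ with $|\gamma|\subset|{\mathcal C}_2|\cap|{\mathcal C}_3|$ as in \eqref{noexbi2}) the intersection carries at least three states, the final clause of the lemma forces case~(d), so $|{\mathcal C}_2|\subset|{\mathcal C}_3|$ or conversely, and then ${\mathcal C}_2$ (resp.\ ${\mathcal C}_3$) itself is a shortest path contained in $|{\mathcal C}_2|\cap|{\mathcal C}_3|$ that contains $\omega$. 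You instead invoke condition~\eqref{noexbi1} to build the concatenated shortest paths ${\mathcal C}_{12},{\mathcal C}_{13}$ and apply the lemma to that pair, using $\nu_1$ as the common endpoint and $|{\mathcal C}_1|$ as the supply of common states; from $|{\mathcal C}_{12}|\subset|{\mathcal C}_{13}|$ you then extract $|{\mathcal C}_2|\subset|{\mathcal C}_3|$ via the arc parametrization. The paper's route is shorter and does not call on \eqref{noexbi1} at all; your route is a little longer but makes explicit use of both defining conditions and does not presuppose the existence of a shortest path inside $|{\mathcal C}_2|\cap|{\mathcal C}_3|$.
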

\begin{proof}
To see this,
suppose $\omega$ to be endpoint of three shortest paths ${\mathcal{C}}_1$, ${\mathcal{C}}_2$ and ${\mathcal{C}}_3$. Then, in line with
Definition \ref{bipunkt}\,\eqref{noexbi2}, suppose $|\gamma|\subset |{\mathcal{C}}_2|\cap |{\mathcal{C}}_3|$, for some shortest path $\gamma$. Accordingly, $|{\mathcal{C}}_2|\cap |{\mathcal{C}}_3|$ then contains at least three mutually different states, one of which can be assumed   to be the common endpoint $\omega$. Thus, we can apply Lemma \ref{bifurc} with ${\mathcal{C}}={\mathcal{C}}_2$ and $\hat{\mathcal{C}}={\mathcal{C}}_3$. The result is that then only $|{\mathcal{C}}_2|\subset |{\mathcal{C}}_3|$ or  $|{\mathcal{C}}_3|\subset |{\mathcal{C}}_2|$ can occur. In the first case, $\gamma={\mathcal{C}}_2$ is a shortest path obeying $\omega\in |\gamma|\subset |{\mathcal{C}}_2|\cap |{\mathcal{C}}_3|$. In the second case, $\gamma={\mathcal{C}}_3$ is a shortest path obeying $\omega\in |\gamma|\subset |{\mathcal{C}}_2|\cap |{\mathcal{C}}_3|$. Hence, in either case, this is showing that the condition \eqref{noexbi2} for $\omega$ never can be met.
\end{proof}
\begin{remark}\label{allgbi}
In a general metric space the analogous to (a)--(e) of Lemma \ref{bifurc} conditions are necessary and sufficient to assure that no bifurcation points exist.
\end{remark}
Let ${\mathcal{C}}$ be a shortest path extending from endpoint $\nu$ to endpoint $\varrho\not=\nu$. A shortest path $\hat{\mathcal{C}}$ is said to extend ${\mathcal{C}}$ beyond $\varrho$ if
$|{\mathcal{C}}|\subsetneq |\hat{\mathcal{C}}|$ and $\nu$ is a common endpoint. Whether or not an extension of ${\mathcal{C}}$ beyond $\varrho$ exists is given by the following result.
\begin{lemma}\label{Bext}
Let $\{\pi,{\mathcal H}_\pi\}$ be a unital $^*$-representation of $M$ with  non-trivial $\pi$-fibres of $\nu, \varrho$. An extension of ${\mathcal{C}}$ beyond $\varrho$ exists if, and only if, on $\pi(M)^{\,\prime}$
\begin{equation}\label{Bext1}
h^\pi_{\varphi,\varphi}\ll h^\pi_{\zeta,\varphi}
\end{equation}
holds, for $\zeta\in {\mathcal S}_{\pi,M}(\varrho)$, $\varphi\in {\mathcal S}_{\pi,M}(\nu)$ chosen in accordance with $\langle \zeta,\varphi\rangle_\pi=F(M|\nu,\varrho)$.
\end{lemma}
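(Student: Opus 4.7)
The plan is to realise the question of extending $\mathcal{C}$ beyond $\varrho$ as a positivity condition on a specific one-parameter family of vectors in $\mathcal H_\pi$. By Theorem \ref{gearcuni} combined with Corollary \ref{normalpara}, up to reparametrization $\mathcal{C}$ equals a geodesic arc whose normal parametrization can be realised in the given $\pi$ as
\[
\hat\varphi_\theta := \cos\theta\,\varphi + \sin\theta\,\eta,\qquad \theta\in [0,\theta_0],
\]
where $\theta_0:=\arccos F$, $F:=F(M|\nu,\varrho)$, and $\eta:=(\zeta-F\varphi)/\sin\theta_0$ is a unit vector in $\mathcal H_\pi$ orthogonal to $\varphi$ (note $\hat\varphi_0=\varphi$, $\hat\varphi_{\theta_0}=\zeta$, and the substitution $t=\sin\theta/\sin\theta_0$ turns $(\hat\varphi_\theta)$ into the family of \eqref{geo0}). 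Extension of $\mathcal{C}$ beyond $\varrho$ then amounts to prolonging this family to $\theta\in[0,\theta_0+\delta]$ with some $\delta>0$, each $\hat\varphi_{\theta_0+\delta}$ continuing to implement a state at Bures distance $\theta_0+\delta$ from $\nu$.

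The technical core is the identity, easily verified from $\zeta=\cos\theta_0\,\varphi+\sin\theta_0\,\eta$,
\[
h^\pi_{\hat\varphi_\theta,\varphi} = \frac{\sin(\theta_0-\theta)}{\sin\theta_0}\,h^\pi_{\varphi,\varphi} + \frac{\sin\theta}{\sin\theta_0}\,h^\pi_{\zeta,\varphi}\qquad (\theta\in\mathbb R),
\]
valid as an equation of linear forms on $\pi(M)^{\,\prime}$. By Theorem \ref{bas5}\,\eqref{bas53} joined with Lemma \ref{bas3}\,\eqref{bas3aa}, positivity of $h^\pi_{\hat\varphi_\theta,\varphi}$ is equivalent to $\hat\varphi_\theta$ implementing a state $\hat\nu_\theta$ with $F(M|\nu,\hat\nu_\theta)=\cos\theta$, and hence to $(\hat\varphi_s)_{s\in[0,\theta]}$ being the normal parametrization of a geodesic arc from $\nu$ to $\hat\nu_\theta$ passing through $\varrho$ at $s=\theta_0$. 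For $\theta=\theta_0+\delta$, $\delta>0$, the identity specialises to
\[
h^\pi_{\hat\varphi_{\theta_0+\delta},\varphi} = -\frac{\sin\delta}{\sin\theta_0}\,h^\pi_{\varphi,\varphi} + \frac{\sin(\theta_0+\delta)}{\sin\theta_0}\,h^\pi_{\zeta,\varphi},
\]
whose non-negativity is equivalent to $h^\pi_{\varphi,\varphi}\leq(\sin(\theta_0+\delta)/\sin\delta)\,h^\pi_{\zeta,\varphi}$ on the positive cone of $\pi(M)^{\,\prime}$.

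The sufficiency of \eqref{Bext1} is now immediate: if $h^\pi_{\varphi,\varphi}\leq\alpha\,h^\pi_{\zeta,\varphi}$ for some $\alpha>0$, then since $\sin(\theta_0+\delta)/\sin\delta\to+\infty$ as $\delta\downarrow 0$, any sufficiently small $\delta>0$ satisfies $\sin(\theta_0+\delta)/\sin\delta\geq\alpha$, giving $h^\pi_{\hat\varphi_{\theta_0+\delta},\varphi}\geq{\mathsf 0}$, and so $(\hat\varphi_\theta)_{\theta\in[0,\theta_0+\delta]}$ implements a geodesic arc $\hat{\mathcal C}$ extending $\mathcal C$ to the state $\hat\varrho$ implemented by $\hat\varphi_{\theta_0+\delta}$, beyond $\varrho$.

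For the necessity, assume a shortest path $\hat{\mathcal C}$ extending $\mathcal C$ beyond $\varrho$ exists, with terminal state $\hat\varrho$. By Theorem \ref{innergeodist} (additivity of the inner metric along a shortest path), $\hat\theta:=\mathfrak d_B(M|\nu,\hat\varrho)=\theta_0+\mathfrak d_B(M|\varrho,\hat\varrho)>\theta_0$. Invoking Lemma \ref{king}, select a $^*$-representation $\{\pi^\prime,\mathcal H_{\pi^\prime}\}$ together with $\xi\in\mathcal S_{\pi^\prime,M}(\varrho|\nu)\cap\mathcal S_{\pi^\prime,M}(\varrho|\hat\varrho)$, $\varphi^\prime\in\mathcal S_{\pi^\prime,M}(\nu)$ and $\hat\zeta^{\prime\prime}\in\mathcal S_{\pi^\prime,M}(\hat\varrho)$ realising the two fidelities with $\xi$. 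Reproducing the vectorial analysis of the proof of Lemma \ref{besonders} (equations \eqref{refd4}--\eqref{refd5}) furnishes a companion $\hat\zeta^\prime\in\mathcal S_{\pi^\prime,M}(\hat\varrho)$ of the canonical form $\hat\zeta^\prime=\cos\hat\theta\,\varphi^\prime+\sin\hat\theta\,\eta^\prime$ with $\eta^\prime=(\xi-\cos\theta_0\,\varphi^\prime)/\sin\theta_0$ and $h^{\pi^\prime}_{\hat\zeta^\prime,\varphi^\prime}\geq{\mathsf 0}$. The key identity, applied in $\pi^\prime$ at $\theta=\hat\theta$, then translates this positivity into the domination $h^{\pi^\prime}_{\varphi^\prime,\varphi^\prime}\ll h^{\pi^\prime}_{\xi,\varphi^\prime}$. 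The main obstacle is the final step: transferring this domination from $\pi^\prime$ back to the originally fixed $\pi$. I would carry this out by embedding $\pi$ and $\pi^\prime$ within a common enveloping $^*$-representation (e.g., $\pi\oplus\pi^\prime$, or the standard-form representation of the $vN$-algebra $\pi(M)^{\,\prime\prime}$), so that the pairs $(\varphi,\zeta)$ and $(\varphi^\prime,\xi)$ become related through the partial isometries of $\mathcal U_{\pi,M}(\varphi,\zeta)$ supplied by Corollary \ref{srho} and the uniqueness assertions of Section \ref{uniass}. Under this identification, the inequality between positive normal forms on the respective commutants transports, yielding $h^\pi_{\varphi,\varphi}\ll h^\pi_{\zeta,\varphi}$ and completing the proof.
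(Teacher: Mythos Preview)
Your sufficiency argument is essentially the paper's, rewritten in the angle parametrization; both construct the extended vector and verify positivity of the relevant $h$-form. You do, however, skip one step: identifying $\mathcal{C}$ with the specific geodesic arc implemented by $(\hat\varphi_\theta)$ requires knowing that the arc in $\mathcal{C}^{\nu,\varrho}(M)$ is unique, which is not given a priori. Condition \eqref{Bext1} does force $s(h^\pi_{\zeta,\varphi})=p^{\,\prime}_\pi(\varphi)$, hence $\nu_\varrho=\mathsf{0}$ by \eqref{diffreteile2ab}, hence uniqueness by Theorem \ref{einga}; the paper reaches the same conclusion via Corollary \ref{porter3}\,\eqref{porter4b} applied to the extension.

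Your necessity argument diverges from the paper and leaves a genuine gap at the transfer step. The paper avoids changing representations altogether: starting from an extension $\hat\gamma\in\mathcal{C}^{\nu,\omega}(M)$ implemented in some $\hat\pi$, it first observes (via Corollary \ref{porter3}\,\eqref{porter4b} and Corollary \ref{uniqueimp}) that $\nu_\varrho=\mathsf{0}$, so Theorem \ref{unique} forces $f^\pi_{\zeta,\varphi}=f^{\hat\pi}_{\hat\zeta,\hat\varphi}$ as linear forms on $M$. The terminal state $\omega$ is then expressed by a formula \eqref{Bext3a} involving only $\nu$, $\varrho$ and this common $f$, so defining $\xi:=s^{-1}(\zeta-\hat\lambda(s)\varphi)$ \emph{directly in the original $\pi$} produces $\xi\in\mathcal{S}_{\pi,M}(\omega)$. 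From $\langle\xi,\varphi\rangle_\pi=\hat F$ one obtains $h^\pi_{\xi,\varphi}\geq\mathsf{0}$, and expanding $\xi$ gives \eqref{Bext1} in $\pi$ immediately. This bypasses your obstacle entirely: no embedding of $\pi$ and $\pi^\prime$, no invocation of Corollary \ref{srho}, no transport of inequalities between commutants. Your route via Lemma \ref{king} and Lemma \ref{besonders} can in fact be completed---the domination in $\pi^\prime$ already yields $\nu_\varrho=\mathsf{0}$, so Corollary \ref{srho} becomes available in $\pi\oplus\pi^\prime$ and the $h$-forms restrict correctly---but it is significantly heavier than the paper's direct construction, and you have not actually carried it out.
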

\begin{proof}
Let $\{\pi,{\mathcal H}_\pi\}$,   $\zeta\in {\mathcal S}_{\pi,M}(\varrho)$, $\varphi\in {\mathcal S}_{\pi,M}(\nu)$ be such that  $\langle \zeta,\varphi\rangle_\pi=F(M|\nu,\varrho)=F$.
In view of Theorem \ref{gearcuni} we may be content with arguing for the  case of extending a geodesic arc beyond of one of its endpoints.

In line with this, suppose the geodesic arc $\hat{\gamma}\in {\mathcal{C}}^{\nu,\omega}(M)$ to be  an extension beyond $\varrho$ of the geodesic arc $\gamma\in {\mathcal{C}}^{\nu,\varrho}(M)$. In accordance with \eqref{geo0a}, the geodesic arc $\hat{\gamma}:[0,1]\ni t\longmapsto \omega_t$ in respect of some unital $^*$-representation $\{\hat{\pi},\hat{\mathcal H}_{\hat{\pi}}\}$ can be implemented by vectors  $${\hat{\varphi}}_t=t\,\hat{\xi}+\hat{\lambda}(t)\,\hat{\varphi}\in {\mathcal S}_{\hat{\pi},M}(\omega_t)$$ with $\hat{\xi}\in {\mathcal S}_{\hat{\pi},M}(\omega)$,  $\hat{\varphi}\in {\mathcal S}_{\hat{\pi},M}(\nu)$ obeying   $\langle \hat{\xi},\hat{\varphi}\rangle_{\hat{\pi}}=F(M|\nu,\omega)=\hat{F}$, and function
$$\hat{\lambda}(t)=-t\hat{F}+\sqrt{1-t^2(1-\hat{P})} $$
with $\hat{P}=\hat{F}^2$. By choice of $\hat{\xi}$ and $\hat{\varphi}$ equivalently  $h^{\hat{\pi}}_{{\hat{\xi}},{\hat{\varphi}}}\geq {\mathsf 0}$ is fulfilled  on ${\hat{\pi}}(M)^{\,\prime}$.
Since  $\hat{\gamma}$ is passing through $\omega$, $s\in ]0,1[$ has to exist such that $\varrho=\omega_s$, and the vector
\begin{equation}\label{Bext2}
\hat{\zeta}=s\,\hat{\xi}+\hat{\lambda}(s)\,\hat{\varphi}\in {\mathcal S}_{\hat{\pi},M}(\varrho)
\end{equation}
by the previous then is obeying $\langle \hat{\zeta},\hat{\varphi}\rangle_{\hat{\pi}}=F(M|\nu,\varrho)=F$.
 Accordingly, from \eqref{Bext2}
\begin{equation}\label{Bext3}
F=s\,\hat{F}+\hat{\lambda}(s)
\end{equation}
is obtained. Also, since $\varrho$ is an inner point of the geodesic arc $\hat{\gamma}$, by applying Corollary \ref{porter3}, \eqref{porter4b} and Corollary \ref{uniqueimp} one infers that $\nu_\varrho={\mathsf 0}$.
Note that from \eqref{Bext2} the relation
$$
\hat{\xi}=s^{-1}\,\left(\hat{\zeta}-\hat{\lambda}(s)\,\hat{\varphi}\right)
$$
can be inferred to hold. Accordingly,
we get that $\omega$ can be represented as follows
\begin{equation}\label{Bext3a}
\omega=s^{-2}\,\left( \varrho+\hat{\lambda}(s)^2\,\nu- 2\,\hat{\lambda}(s)\,\re f^{\hat{\pi}}_{\hat{\zeta},\hat{\varphi}}\right)
\end{equation}
In respect of $\{\pi,{\mathcal H}_\pi\}$, let us consider a vector $\xi\in {\mathcal H}_\pi$ defined by
\begin{equation}\label{Bext3aa}
\xi=s^{-1}\,\left(\zeta-\hat{\lambda}(s)\,\varphi\right)
\end{equation}
Let $\hat{\omega}$ be the positive linear form implemented by $\xi$ via $\pi$. Then, we find that
\begin{equation}\label{Bext3b}
\hat{\omega}=s^{-2}\,\left( \varrho+\hat{\lambda}(s)^2\,\nu- 2\,\hat{\lambda}(s)\,\re f^\pi_{\zeta,\varphi}\right)
\end{equation}
Since  $\nu_\varrho={\mathsf 0}$ is fulfilled, by Theorem \ref{unique}  with the unique $f\in {\mathfrak F}(M|\nu,\varrho)$ we have
\[
f=f^\pi_{\zeta,\varphi}=f^{\hat{\pi}}_{\hat{\zeta},\hat{\varphi}}
\]
Hence, on comparing \eqref{Bext3b} with \eqref{Bext3a}, $\hat{\omega}=\omega$ has to be followed,  i.e.~$\xi\in {\mathcal S}_{\pi,M}(\omega)$.
In view of \eqref{Bext3} and by the choice of $\zeta$ and $\varphi$, from the latter the relation
\begin{equation}\label{Bext4}
\langle\xi,\varphi\rangle_\pi=s^{-1}\,\left(F-\hat{\lambda}(s)\right)=\hat{F}=F(M|\nu,\omega)
\end{equation}
follows. Hence, $h^\pi_{\xi,\varphi}\geq {\mathsf 0}$ on $\pi(M)^{\,\prime}$ has to be fulfilled. From this together with \eqref{Bext3aa}
\[
s^{-1}\,h^\pi_{\xi,\varphi}=h^\pi_{\zeta,\varphi}-\hat{\lambda}(s)\,h^\pi_{\varphi,\varphi}\geq {\mathsf 0}
\]
is obtained. Owing to  $\hat{\lambda}(s)>0$ this proves \eqref{Bext1}. Thus, the latter condition is a necessary condition for an extension $\hat{\gamma}$ of $\gamma$ beyond $\varrho$ to exist.

To see sufficiency, let $\gamma: [0,1]\ni t\longmapsto\nu_t$ be a geodesic arc $\gamma\in {\mathcal{C}}^{\nu,\varrho}(M)$, and  let \eqref{Bext1} hold, for some implementing vectors $\zeta$ and $\varphi$ as asserted, for $\varrho$ and $\nu$, respectively. Then $\langle\zeta,\varphi\rangle_\pi=F(M|\nu,\omega)=F$, and  $\varepsilon$ with $1>\varepsilon >0$ exists such that
\begin{equation}\label{Bext5}
h^\pi_{\zeta,\varphi}\geq \varepsilon\,h^\pi_{\varphi,\varphi}
\end{equation}
is fulfilled on $\pi(M)^{\,\prime}$. Now, consider the unit vector $\xi$ defined by
$
\xi=\alpha^{-1}\left(\zeta-\varepsilon\,\varphi\right)
$, with the shortcut  $\alpha=\|\zeta-\varepsilon\,\varphi\|_\pi$. In view of \eqref{Bext5} $\xi$ is obeying $h^\pi_{\xi,\varphi}\geq {\mathsf 0}$. Let $\omega$ be the state implemented by $\xi$, i.e.~$\xi\in {\mathcal S}_{\pi,M}(\omega)$. We then consider the geodesic arc $\hat{\gamma}\in {\mathcal{C}}^{\nu,\omega}(M)$,  $\hat{\gamma}:[0,1]\ni t\longmapsto\omega_t$,  implemented by the vectors
$
\psi_t=t\,\xi+\hat{\lambda}(t)\,\varphi\in {\mathcal S}_{\pi,M}(\omega_t)
$, with $\hat{\lambda}(t)$ referring to \begin{equation}\label{Bext5a}
\langle\xi,\varphi\rangle_\pi=F(M|\nu,\omega)=\hat{F}=\alpha^{-1}(F-\varepsilon)
\end{equation}
Obviously, the implementing family of $\hat{\gamma}$ equivalently can be read also as following:
\begin{subequations}\label{Bext6a}
\begin{equation}\label{Bext6}
\psi_t=\left(\frac{t}{\alpha}\right)\,\zeta+\left(\hat{\lambda}(t)-\left(\frac{t}{\alpha}\right)\,\varepsilon\right)\,\varphi
\end{equation}
From $\zeta=\alpha\,\xi+\varepsilon\,\varphi$ we conclude that
$
1=\alpha^2+\varepsilon^2+2\,\alpha\,\varepsilon \hat{F}
$ holds.
Hence, $$\alpha=-\varepsilon\,\hat{F} +\sqrt{\varepsilon^2\hat{F}^2+1-\varepsilon^2}=-\varepsilon\,\hat{F} +\sqrt{1-\varepsilon^2(1-\hat{P})}=\hat{\lambda}(\varepsilon)$$ can be inferred. Thus especially, since for  $0<\varepsilon<1$ always   $0<\hat{\lambda}(\varepsilon)<1$ follows,  $0<\alpha<1$ has to be fulfilled. Also, since all vectors in \eqref{Bext6} are unit vectors, and since  $\langle\zeta,\varphi\rangle_\pi=F(M|\nu,\omega)=F$ is fulfilled, for $0\leq t\leq \alpha$ we have to conclude to
\begin{equation}\label{Bext7}
\lambda\left(\frac{t}{\alpha}\right)=\hat{\lambda}(t)-\left(\frac{t}{\alpha}\right)\,\varepsilon
\end{equation}
\end{subequations}
Accordingly, by \eqref{Bext6a} the restriction $\hat{\gamma}|[0,\alpha]$ of $\hat{\gamma}$ to the interval $[0,\alpha]$ in a one-to-one manner  corresponds to the geodesic arc $\gamma'\in {\mathcal{C}}^{\nu,\varrho}(M)$, $\gamma':[0,1]\ni t\longmapsto\nu_t$, implemented via $\pi$ by the family $[0,1]\ni t\longmapsto\varphi_t=t\,\zeta+\lambda(t)\,\varphi$. Thereby, the mentioned correspondence is given by means of the reparameterization $\nu_t=\omega_{\alpha\, t}$, for $0\leq t\leq 1$. Hence, $\hat{\gamma}$ is an extension beyond $\varrho$ of the geodesic arc  $\gamma'$. Finally, since $\varrho=\omega_\alpha$ is an inner point of the geodesic arc $\hat{\gamma}$, in line with   Corollary \ref{porter3}, \eqref{porter4b} and Corollary \ref{uniqueimp}  we have that the following is satisfied: $$\left(\omega_0\right)_{\omega_\alpha}=\nu_\varrho={\mathsf 0}$$
Hence, by Theorem \ref{einga}, there is exactly one geodesic arc in ${\mathcal{C}}^{\nu,\varrho}(M)$, and therefore $\gamma=\gamma'$. That is, the constructed geodesic arc $\hat{\gamma}$ is an extension of $\gamma$ beyond $\varrho$.
\end{proof}
\begin{definition}\label{geExt}
A shortest path ${\mathcal{C}}$ with endpoint $\varrho$ admits a geodesic extension beyond $\varrho$ if there is a shortest path $\hat{\mathcal{C}}$ starting at a state $\omega\in |{\mathcal{C}}|$ and passing through $\varrho$ such that $|{\mathcal{C}}|\cap |\hat{\mathcal{C}}|$ is supporting a shortest path with endpoints $\omega, \varrho$.
\end{definition}
The extension of a shortest path beyond an endpoint as done in Lemma \ref{Bext} is a special case of a geodesic extension. A state $\varrho$ such that any shortest path ending in $\varrho$ admits a geodesic extension beyond $\varrho$ is termed `polydirectional transition state'.
\begin{theorem}\label{geoBext}
Let ${\mathcal{C}}$ be a shortest path with endpoints $\nu$ and  $\varrho\not=\nu$. A geodesic extension of ${\mathcal{C}}$ beyond $\varrho$ exists if, and only if, over $\pi(M)^{\,\prime}$ one has
\begin{equation}\label{geoBext1}
h^\pi_{\varphi,\varphi}\ll h^\pi_{\zeta,\varphi}+h^\pi_{\zeta,\zeta}
\end{equation}
for $\zeta\in {\mathcal S}_{\pi,M}(\varrho)$, $\varphi\in {\mathcal S}_{\pi,M}(\nu)$ chosen with $\langle \zeta,\varphi\rangle_\pi=F(M|\nu,\varrho)$, with respect to any unital $^*$-representation $\{\pi,{\mathcal H}_\pi\}$ of $M$ where this choice can be accomplished.
\end{theorem}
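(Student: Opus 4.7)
By Theorem \ref{gearcuni}, up to reparametrization the shortest path $\mathcal{C}$ coincides with a geodesic arc $\gamma\in\mathcal{C}^{\nu,\varrho}(M)$, which I implement as $\gamma:[0,1]\ni t\mapsto\nu_t$ with $\varphi_t = t\,\zeta+\lambda(t)\,\varphi \in \mathcal{S}_{\pi,M}(\nu_t)$ via some $^*$-representation $\pi$ admitting $\pi$-fibres of both $\nu$ and $\varrho$, where $\langle\zeta,\varphi\rangle_\pi = F$ and $h^\pi_{\zeta,\varphi}\geq{\mathsf 0}$ on $\pi(M)^{\,\prime}$. The plan is to translate the existence of a geodesic extension of $\mathcal{C}$ beyond $\varrho$ into the existence of some $s\in[0,1)$ for which Lemma \ref{Bext} applies to the sub-arc $\gamma|[s,1]$: an extension $\hat{\mathcal{C}}$ of $\gamma|[s,1]$ beyond $\varrho$ provides a shortest path starting at $\omega=\nu_s\in|\mathcal{C}|$ and passing through $\varrho$ whose support intersects $|\mathcal{C}|$ in a set containing the shortest sub-arc $|\gamma|[s,1]|$ from $\omega$ to $\varrho$, as required by Definition \ref{geExt}; conversely, any geodesic extension starting at some $\omega\in|\mathcal{C}|$ restricts via its sub-arc from $\omega$ to $\varrho$ to yield such an extension.

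The sub-arc $\gamma|[s,1]$ is implemented in $\pi$ by the pair $(\varphi_s,\zeta)$, and because $h^\pi_{\zeta,\varphi_s}= s\,h^\pi_{\zeta,\zeta} + \lambda(s)\,h^\pi_{\zeta,\varphi}\geq{\mathsf 0}$ is a non-negative combination of positive forms, Theorem \ref{bas5}\,\eqref{bas53} and Lemma \ref{bas3}\,\eqref{bas3aa} give $\langle\zeta,\varphi_s\rangle_\pi=s+\lambda(s)F = F(M|\nu_s,\varrho)$, so the pair $(\varphi_s,\zeta)$ meets the hypothesis of Lemma \ref{Bext}. Direct bilinear expansion of $h^\pi_{\varphi_s,\varphi_s}$ using $\varphi_s=s\,\zeta+\lambda(s)\,\varphi$ yields
\begin{equation*}
h^\pi_{\varphi_s,\varphi_s} = s^2\,h^\pi_{\zeta,\zeta} + 2s\lambda(s)\,h^\pi_{\zeta,\varphi} + \lambda(s)^2\,h^\pi_{\varphi,\varphi}.
\end{equation*}
The first two summands are each pointwise dominated by a scalar multiple of $h^\pi_{\zeta,\varphi_s}$, so the condition $h^\pi_{\varphi_s,\varphi_s}\ll h^\pi_{\zeta,\varphi_s}$ demanded by Lemma \ref{Bext} reduces to
\begin{equation*}
\lambda(s)^2\,h^\pi_{\varphi,\varphi} \ll s\,h^\pi_{\zeta,\zeta}+\lambda(s)\,h^\pi_{\zeta,\varphi}.
\end{equation*}

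The \textit{only if} direction then follows by taking any such $s\in[0,1)$: for $s>0$, dividing by $\lambda(s)^2>0$ produces $h^\pi_{\varphi,\varphi}\leq C(h^\pi_{\zeta,\zeta}+h^\pi_{\zeta,\varphi})$ with $C=K\max(s/\lambda(s)^2,\,1/\lambda(s))$, while for $s=0$ one recovers the stronger $h^\pi_{\varphi,\varphi}\ll h^\pi_{\zeta,\varphi}$ of Lemma \ref{Bext} directly; both imply \eqref{geoBext1}. Conversely, assuming \eqref{geoBext1} with constant $K$, for any fixed $s\in(0,1)$ one has
\begin{equation*}
\lambda(s)^2\,h^\pi_{\varphi,\varphi}\leq K\lambda(s)^2(h^\pi_{\zeta,\zeta}+h^\pi_{\zeta,\varphi}) \leq K\max\!\left(\frac{\lambda(s)^2}{s},\,\lambda(s)\right)\bigl(s\,h^\pi_{\zeta,\zeta}+\lambda(s)\,h^\pi_{\zeta,\varphi}\bigr),
\end{equation*}
so Lemma \ref{Bext} delivers the extension of $\gamma|[s,1]$ beyond $\varrho$ that serves as a geodesic extension of $\mathcal{C}$.

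The principal obstacle will be honouring the set-theoretic clause of Definition \ref{geExt} for the extension produced in the converse direction, namely that $|\mathcal{C}|\cap|\hat{\mathcal{C}}|$ support a shortest path with endpoints $\omega,\varrho$. By construction the intersection already contains the sub-arc $|\gamma|[s,1]|$, which is itself such a shortest path; Lemma \ref{bifurc} applied to the two shortest paths $\mathcal{C}$ and $\hat{\mathcal{C}}$—whose intersection contains at least the states $\omega,\varrho$ and any interior state of the sub-arc—rules out cases (a), (b), (c) and forces configuration (d) or (e), each of which is compatible with the requirement of Definition \ref{geExt}. A subsidiary point is the representation-independence wording in \eqref{geoBext1}: this follows because each $h^\pi_{\cdot,\cdot}$ restricts to a normal form on $\pi(M)^{\,\prime\prime}$ intrinsically determined by $(\nu,\varrho)$ together with the data realising the fidelity, so the domination is transportable between representations in the spirit of Lemma \ref{eqdist}.
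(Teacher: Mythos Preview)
Your proof is correct and follows essentially the same route as the paper: reduce via Theorem \ref{gearcuni} to a geodesic arc implemented by $\varphi_t=t\zeta+\lambda(t)\varphi$, then use Lemma \ref{Bext} on the sub-arc from $\nu_s$ to $\varrho$ and show that the resulting domination condition $h^\pi_{\varphi_s,\varphi_s}\ll h^\pi_{\zeta,\varphi_s}$ is equivalent to \eqref{geoBext1} by expanding $h^\pi_{\varphi_s,\varphi_s}$ bilinearly. Your discussion of Definition \ref{geExt} via Lemma \ref{bifurc} is a bit more explicit than the paper's, which simply asserts the reduction to ``there is $0<t<1$ such that the unique geodesic arc in $\mathcal{C}^{\nu_t,\varrho}(M)$ extends beyond $\varrho$'' without further comment; the uniqueness for inner $t$ (Corollary \ref{porter3}) is what makes that sub-arc coincide with $\gamma|[t,1]$, which you are implicitly using when you say the restriction of $\hat{\mathcal{C}}$ ``yields such an extension.''
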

\begin{proof}
By Theorem \ref{gearcuni} we may assume  ${\mathcal{C}}=\gamma$, with $\gamma\in {\mathcal{C}}^{\nu,\varrho}(M)$, $\gamma: [0,1]\ni t\longmapsto \nu_t$, implemented with respect to some unital $^*$-representation $\{\pi,{\mathcal H}_\pi\}$ by
$
[0,1]\ni t\longmapsto \varphi_t=t\,\zeta+\lambda(t)\,\varphi
$
with $\zeta\in {\mathcal S}_{\pi,M}(\varrho)$, $\varphi\in {\mathcal S}_{\pi,M}(\nu)$ and $\langle \zeta,\varphi\rangle_\pi=F(M|\nu,\varrho)$.

According to Definition \ref{geExt}, ${\mathcal{C}}$ admits a geodesic extension beyond $\varrho$ if, and only if, there is $0<t<1$ such that with $\omega=\nu_t$ the unique geodesic arc $\hat{\gamma}\in {\mathcal{C}}^{\omega,\varrho}(M)$ can be extended as a geodesic arc beyond $\varrho$. Since the latter is equivalent to
\begin{equation}\label{geoBext2}
h^\pi_{\varphi_t,\varphi_t}\ll h^\pi_{\zeta,\varphi_t}
\end{equation}
for some $t\in ]0,1[$ by Lemma  \ref{Bext},  equivalence of \eqref{geoBext1} and  \eqref{geoBext2} is in quest.
By positivity of $h^\pi_{\zeta,\zeta}$, $h^\pi_{\zeta,\varphi}$, $h^\pi_{\varphi,\varphi}$, and since  $t\in ]0,1[$ and $\lambda(t)\in ]0,1[$ are fulfilled, from
 \begin{equation}\label{geoBext3}
h^\pi_{\varphi_t,\varphi_t}=t^2\,h^\pi_{\zeta,\zeta}+\lambda(t)^2\,h^\pi_{\varphi,\varphi}+2\,t\lambda(t)\,h^\pi_{\zeta,\varphi}
\end{equation}
we see    $\lambda(t)^2\,h^\pi_{\varphi,\varphi} \leq h^\pi_{\varphi_t,\varphi_t}$, that is, $h^\pi_{\varphi,\varphi}\ll h^\pi_{\zeta,\varphi_t} $ is fulfilled. Hence, from \eqref{geoBext2}
\[
h^\pi_{\varphi,\varphi}\ll h^\pi_{\zeta,\varphi_t}=t\,h^\pi_{\zeta,\zeta}+\lambda(t)\,h^\pi_{\zeta,\varphi}\leq  h^\pi_{\zeta,\zeta}+h^\pi_{\zeta,\varphi}
\]
follows. Thus condition  \eqref{geoBext1} is necessary for \eqref{geoBext2} to be fulfilled. To see sufficiency, suppose \eqref{geoBext1} to be satisfied. Then, there exists $\varepsilon >0$ such that $$\varepsilon\, h^\pi_{\varphi,\varphi}\leq h^\pi_{\zeta,\zeta}+h^\pi_{\zeta,\varphi}$$
From this together with \eqref{geoBext3} conclude that for each fixed $t$ with $0<t<1$
\begin{eqnarray}
h^\pi_{\varphi_t,\varphi_t} & \leq & \left(t^2+\frac{\lambda(t)^2}{\varepsilon}\right)\,h^\pi_{\zeta,\zeta}+\left(2\,t\lambda(t)+\frac{\lambda(t)^2}{\varepsilon}\right)\,h^\pi_{\zeta,\varphi}\\
&=& \left(t+\frac{\lambda(t)^2}{t\varepsilon}\right)t\,h^\pi_{\zeta,\zeta}+\left(2\,t+\frac{\lambda(t)}{\varepsilon}\right)\lambda(t)\,h^\pi_{\zeta,\varphi}\\
&\leq & \alpha \left(t\,h^\pi_{\zeta,\zeta}+\lambda(t)\,h^\pi_{\zeta,\varphi}\right)=\alpha\,h^\pi_{\zeta,\varphi_t}
\end{eqnarray}
with $\alpha$ at least as large as the maximum of $(t+\lambda(t)/t\varepsilon)$ and $(2\,t+\lambda(t)/\varepsilon)$. This is the same as \eqref{geoBext2} and assures that a geodesic extension of $\gamma$ beyond $\varrho$ exists.
\end{proof}
Suppose ${\mathcal C } :I\ni \theta\longmapsto \nu_\theta\in {\mathcal S}(M)$, with compact interval $I$, to be the normal parameterization of a Bures-rectifiable continuous curve. ${\mathcal C }$ is termed `geodesic curve' if to each $\theta_0\in I$ there exists $\varepsilon>0$ such that ${\mathcal C }|[\theta_0-\varepsilon,\theta_0+\varepsilon]\cap I$ is a shortest path. Thus especially, each geodesic curve can be decomposed into finitely many shortest paths. Let us consider a continuous curve ${\mathcal C }$ with normal parameterization
$${\mathcal C }: [0,\beta\,[\ni \theta\longmapsto \nu_\theta\in {\mathcal S}(M)$$
and where $\beta=+\infty$ is possible. Suppose  $\nu_0=\nu$ and assume that ${\mathcal C }|[0,\alpha]$ for each $\alpha$ with  $0\leq \alpha<\beta$ is a geodesic curve. In case that either $\beta=+\infty$ or $0<\beta<\infty$ is the  maximal possible value such that these properties hold,  the curve ${\mathcal C }$ will be referred to as `halfgeodesic' with boundary point $\nu$ and length $\beta$.
We remark that in case if any state proves to be a polydirectional transition state, then each halfgeodesic is of infinite length.
As a special case, one has a `geodesic loop', which is a halfgeodesic which is recurrent, that is, there exists a smallest $\theta_0\in ]0,\beta[$ such that  $\nu_\theta=\nu_{\theta_0+\theta}$, for all $\theta\geq 0$ satisfying $\theta_0+\theta< \beta$. Clearly, in this case due to the maximality requirement in the definition the length of a geodesic loop has to be re-defined as $\beta=\infty$, whereas the length of one cycle is $\theta_0$. Note that as a consequence of Corollary \ref{nobipass}, for a shortest path $\gamma\in {\mathcal{C}}^{\nu,\varrho}(M)$, either there exists a unique halfgeodesic ${\mathcal{C}}$ with endpoint $\nu$ and length $\beta>{\mathfrak d}_B(M|\nu,\varrho)$ and obeying $|\gamma|\subset | {\mathcal{C}}|$, or $|\gamma|\backslash\{\varrho\}$ is the supporting set of a halfgeodesic of the finite length $\beta={\mathfrak d}_B(M|\nu,\varrho)$. In the former case, refer to ${\mathcal{C}}$ as  maximal geodesic extension of $\gamma$ beyond $\varrho$, whereas in the latter case $\gamma$  does  not admit a geodesic extension beyond $\varrho$. Obviously, Theorem \ref{geoBext} provides the tool to decide whether or not this case occurs. We are going to give some applications.
\begin{example}\label{notrans}
If mutually disjoint states $\nu$ and $\varrho$ exist, then neither $\nu$ nor $\varrho$ can be  polydirectional transition states,  and halfgeodesics of finite length exist.
\end{example}
\begin{proof}
Let $\{\pi,{\mathcal H}_\pi\}$ be a unital $^*$-representation with non-trivial $\pi$-fibres of $\nu$ and $\varrho$. Then, we may choose $\varphi\in {\mathcal S}_{\pi,M}(\nu)$ and $\zeta\in {\mathcal S}_{\pi,M}(\varrho)$ with $\langle\zeta,\varphi\rangle_\pi=F(M|\nu,\varrho)$. By mutual disjointness of $\nu$ and $\varrho$, also $\nu_\varrho$ and $\varrho_\nu$ are mutually disjoint in a trivial way (since then $\nu_\varrho={\mathsf 0}=\varrho_\nu$). Thus, the geodesic arc $\gamma\in {\mathcal{C}}^{\nu,\varrho}(M)$ is uniquely determined. Therefore each shortest path with endpoints $\nu$ and $\varrho$ has the same supporting set $|\gamma|$. By disjointness, $p_\pi(\varphi)$ and $p_\pi(\zeta)$ as well as $p^{\,\prime}_\pi(\varphi)$ and $p^{\,\prime}_\pi(\zeta)$ have to be mutually orthogonal orthoprojections in $N=\pi(M)^{\,\prime\prime}$ and $N^{\,\prime}$, respectively. Thus $\nu\perp\varrho$, which implies $F(M|\nu,\varrho)=0$, by Corollary \ref{subadd}\,\eqref{subadd2},
and $h^\pi_{\zeta,\zeta}\perp h^\pi_{\varphi,\varphi}$ and  $h^\pi_{\zeta,\varphi}=h^\pi_{\varphi,\zeta}={\mathsf 0}$   on $N^{\,\prime}$. Hence,
\[
h^\pi_{\varphi,\varphi}\not\ll h^\pi_{\zeta,\varphi}+h^\pi_{\zeta,\zeta},\ h^\pi_{\zeta,\zeta}\not\ll h^\pi_{\varphi,\zeta}+h^\pi_{\varphi,\varphi}
\]
that is, the condition \eqref{geoBext1} cannot be satisfied neither in respect of $\varrho$ nor in respect of $\nu$. By Theorem \ref{Bext}, $\gamma$ does not admit geodesic extensions beyond $\varrho$ or $\nu$, and therefore both states cannot be  polydirectional states, and  $|\gamma|\backslash \{\varrho\}$ or  $|\gamma|\backslash \{\nu\}$, respectively, are supporting sets of halfgeodesics with boundary point $\nu$ or $\varrho$, respectively, and according to \eqref{geodist} are of finite length $\beta={\mathfrak d}_B(M|\nu,\varrho)=\pmb{\pi}/2$.
\end{proof}
\begin{example}\label{loop}
For a unital $^*$-representation $\{\pi,{\mathcal H}_\pi\}$ of $M$, if orthoprojections $p,q\in \pi(M)^{\,\prime\prime}$ with $p\perp q$, $p\sim q$ exist, then geodesic loops with cycle-length $\pmb{\pi}$ exist.
\end{example}
\begin{proof}
Let $N=\pi(M)^{\,\prime\prime}$. Consider a unit vector $\varphi\in {\mathcal H}_\pi$ with $p\varphi=\varphi$ and a partial isometry $v\in N$  with $v^*v=p$, $vv^*=q$.  Define states $\nu$ and $\varrho$ by $\varphi\in {\mathcal S}_{\pi,M}(\nu)$, $\zeta=v\varphi\in  {\mathcal S}_{\pi,M}(\varrho)$. Then   $$h^\pi_{\zeta,\zeta}=h^\pi_{v\varphi,v\varphi}=h^\pi_{\varphi,\varphi}$$ on $N^{\,\prime}$. Due to $p\perp q$ we can be assured that $\nu\perp\varrho$. Hence, $F(M|\nu,\varrho)=0$, by Corollary \ref{subadd}\,\eqref{subadd2}, and thus  $h^\pi_{\zeta,\varphi}={\mathsf 0}$. For $0\leq \theta\leq \frac{\pi}{2}$ let unit vectors given by
\begin{equation}\label{uvec}
\varphi_\theta=\zeta\,\sin \theta+\varphi\,\cos \theta
\end{equation}
Then, according to formulae \eqref{para6} and since due to  $\nu\perp\varrho$ one has  $\theta_0=\pmb{\pi}/2$, by the family $\{\varphi_\theta: 0\leq \theta\leq \pmb{\pi}/2 \}$ the normal parameterization of the  uniquely determined geodesic arc $\gamma\in {\mathcal{C}}^{\nu,\varrho}(M)$ is implemented. Clearly, formula \eqref{uvec} yields unit vectors for $\theta\geq \pmb{\pi}/2$ as well, and then is cyclic with period $\pi$. We will show that by the states $\nu_\theta$ given by $\varphi_\theta\in {\mathcal S}_{\pi,M}(\nu_\theta)$ for $0\leq \theta<\infty$ a geodesic loop is given. In fact, the condition \eqref{geoBext1} is satisfied in a trivial way, and thus the geodesic arc $\gamma\in {\mathcal{C}}^{\nu,\varrho}(M)$ according to Theorem \ref{geoBext} admits a geodesic extension beyond $\varrho$. Also, owing to
\begin{subequations}\label{uvec0}
\begin{eqnarray}\label{uvec0a}
h^\pi_{\varphi_{3\pmb{\pi}/4},\varphi_{\pmb{\pi}/4}}\ &=&h^\pi_{(\zeta-\varphi)/\sqrt{2},(\zeta+\varphi)/\sqrt{2}}=\frac{1}{2}\,\bigl(h^\pi_{\zeta,\zeta}-h^\pi_{\varphi,\varphi}\bigr)={\mathsf 0}\\
\label{uvec0b}
h^\pi_{\varphi_{\pmb{\pi}/4},\varphi_{\pmb{\pi}/4}}\ \ &=&h^\pi_{(\zeta+\varphi)/\sqrt{2},(\zeta+\varphi)/\sqrt{2}}=\frac{1}{2}\,\bigl(h^\pi_{\zeta,\zeta}+h^\pi_{\varphi,\varphi}\bigr)\\
\label{uvec0c}
h^\pi_{\varphi_{3\pmb{\pi}/4},\varphi_{3\pmb{\pi}/4}}&=&h^\pi_{(\zeta-\varphi)/\sqrt{2},(\zeta-\varphi)/\sqrt{2}}=\frac{1}{2}\,\bigl(h^\pi_{\zeta,\zeta}+h^\pi_{\varphi,\varphi}\bigr)
\end{eqnarray}
\end{subequations}
from \eqref{uvec0a} and Lemma \ref{bas3}\,\eqref{bas3b} we find $F\bigl(M|\nu_{3\pmb{\pi}/4},\nu_{\pmb{\pi}/4}\bigr)=0$, and thus by
\[
\zeta_\vartheta=\frac{(\zeta-\varphi)}{\sqrt{2}}\,\sin\vartheta+ \frac{(\zeta+\varphi)}{\sqrt{2}}\,\cos\vartheta=\zeta\,\sin (\vartheta+\pmb{\pi}/4)+\varphi\,\cos (\vartheta+\pmb{\pi}/4)=\varphi_{(\vartheta+\pmb{\pi}/4)}
\]
for $0\leq \vartheta\leq \pmb{\pi}/2$ a geodesic arc $\gamma' \in {\mathcal{C}}^{\nu_{\pmb{\pi}/4},\nu_{3\pmb{\pi}/4}}(M)$ of length $\pmb{\pi}/2$ is implemented, and which is passing through $\varrho$ at $\vartheta=\pmb{\pi}/4$, and with the help of which obviously Definition \ref{geExt} is fulfilled, with  ${{\mathcal{C}}}=\gamma$ and ${\hat{\mathcal{C}}}=\gamma'$ and $\omega=\nu_{\pmb{\pi}/4}$. Obviously, $|\gamma|\cup |\gamma'|$ then is the supporting set of a geodesic curve of length $3\pmb{\pi}/4$. Note that according to Theorem \ref{geoBext} from  \eqref{uvec0} we infer that $\gamma'$ admits a geodesic extension beyond $\nu_{3\pmb{\pi}/4}$. To construct such an extension practically, consider that
\begin{subequations}\label{uvec1}
\begin{eqnarray}\label{uvec1a}
h^\pi_{\varphi_{\pi},\varphi_{\pmb{\pi}/2}} \ \  &=&h^\pi_{(-\varphi),\zeta}\ \ \ \ \, =-h^\pi_{\varphi,\zeta}=-h^\pi_{\zeta,\varphi}={\mathsf 0}\\
\label{uvec1b}
h^\pi_{\varphi_\pi,\varphi_\pi}\ \ \ \ \, &=&h^\pi_{(-\varphi),(-\varphi)}=\phantom{-}h^\pi_{\varphi,\varphi}\\
\label{uvec1c}
h^\pi_{\varphi_{\pmb{\pi}/2},\varphi_{\pmb{\pi}/2}}&=&h^\pi_{\zeta,\zeta}\ \ \ \ \ \ \ \ \  =\phantom{-}h^\pi_{\varphi,\varphi}
\end{eqnarray}
\end{subequations}
Hence, from \eqref{uvec1a} we infer that $F\bigl(M|\nu_\pi,\nu_{\pmb{\pi}/2}\bigr)=0$ is fulfilled, and thus by
 \begin{equation}\label{uvec2}
\eta_\vartheta=-\varphi\,\sin\vartheta+\zeta\,\cos\vartheta=\zeta\,\sin (\vartheta+\pmb{\pi}/2)+\varphi\,\cos (\vartheta+\pmb{\pi}/2)=\varphi_{(\vartheta+\pmb{\pi}/2)}
\end{equation}
for $0\leq \vartheta\leq \pmb{\pi}/2$ a geodesic arc $\gamma'' \in {\mathcal{C}}^{\nu_{\pmb{\pi}/2},\nu_{\pi}}(M)$ of length $\pmb{\pi}/2$ is implemented, and which is passing through $\nu_{3\pmb{\pi}/4}$ at $\vartheta=\pmb{\pi}/4$, and Definition \ref{geExt} is satisfied, with  ${{\mathcal{C}}}=\gamma'$, ${\hat{\mathcal{C}}}=\gamma''$ and $\omega=\nu_{\pmb{\pi}/2}=\varrho$. Thus, $|\gamma|\cup|\gamma'|\cup|\gamma''|$ is seen to be the supporting set of a geodesic curve of length $\pi$ evolving from $\nu$ to $\nu_\pi=\nu$. In view of \eqref{uvec1} it is clear now that the latter curve admits a geodesic extension beyond $\nu$, too. Such an extension is constructed by considering the uniquely determined geodesic arc $\gamma'''\in {\mathcal{C}}^{\nu_{3\pmb{\pi}/4},\nu_{5\pmb{\pi}/4}}(M)$. According to \eqref{uvec1a} we have
$
h^\pi_{\varphi_{5\pmb{\pi}/4},\varphi_{3\pmb{\pi}/4}}=-h^\pi_{\varphi_{3\pmb{\pi}/4},\varphi_{\pmb{\pi}/4}}={\mathsf 0}
$.
Hence,  by the vectors
\begin{eqnarray}\nonumber
\psi_\vartheta&=&-\frac{(\zeta+\varphi)}{\sqrt{2}}\,\sin\vartheta+ \frac{(\zeta-\varphi)}{\sqrt{2}}\,\cos\vartheta\\
\label{uvec3}
&=&\phantom{-}\zeta\,\cos (\vartheta+3\pmb{\pi}/4)+\varphi\,\sin \,(\vartheta+3\pmb{\pi}/4)\\
\nonumber
&=&\phantom{-}\varphi_{(\vartheta+3\pmb{\pi}/4)}
\end{eqnarray}
for $0\leq \vartheta\leq \pmb{\pi}/2$ the geodesic arc $\gamma'''$  is implemented.
On the other hand, by a little  trigonometric calculation from this one infers that
\begin{equation*}
\psi_\vartheta=
\begin{cases}
\phantom{-}\eta_{(\vartheta+\pmb{\pi}/4)} & \text{for }\phantom{\pmb{\pi}/}0\leq\vartheta\leq \pmb{\pi}/4\\
&\\
-\varphi_{(\vartheta-\pmb{\pi}/4)}& \text{for }\pmb{\pi}/4\leq\vartheta\leq \pmb{\pi}/2
\end{cases}
\end{equation*}
and which in view of \eqref{uvec2} and \eqref{uvec3} is showing that $|\gamma'''|\subset|\gamma|\cup|\gamma''|$. Therefore, we have that by  $[0,5\pmb{\pi}/4]\ni \theta\longmapsto \varphi_\theta$
a geodesic curve ${\mathcal{C}}$ of length $5\pmb{\pi}/4$ with initial state $\nu$ is implemented which appears as a geodesic extension of $\gamma$ beyond the other endpoint $\varrho$ of $\gamma$. Thereby, the geodesic extension ${\mathcal{C}}$ has been obtained in the demonstrated way by successively overlapping and sticking together the members of the family $\{\gamma,\gamma',\gamma'',\gamma'''\}$ of geodesic arcs. As we have seen, $\theta_0=\pi$ is minimal with the property that for the states $\nu_\theta$ implemented by $\varphi_\theta$ over $M$ the property $\nu_{\theta_0+\vartheta}=\nu_\vartheta$ is fulfilled, for all $\vartheta$ with $0\leq \vartheta\leq \pmb{\pi}/4$. Also, the terminal state of ${\mathcal{C}}$ then is the midpoint state  of $\gamma$. The extension proceedure with the help of $\{\gamma,\gamma',\gamma'',\gamma'''\}$ now can be repeated again and again ad infinitum and thus inductively is proving that the by ${\mathbb R}_+ \ni \theta\longmapsto \varphi_\theta$ in accordance with \eqref{uvec} given  (uniquely determined) maximal geodesic curve ${\mathcal{C}}(\gamma)$ with starting state $\nu$ and obeying  $|\gamma|\subset |{\mathcal{C}}(\gamma)|=\{\nu_\theta: 0\leq \theta\leq \pi\}$ is a geodesic loop,  with length $\pi$ of the cycle.
\end{proof}
Finally, remind the notion of `geodesic'. Suppose that the  continuous map
\begin{equation}\label{geodesic}
{\mathcal C }: ]\alpha,\beta\,[\ni \theta\longmapsto \nu_\theta\in {\mathcal S}(M)
\end{equation}
is defining a normal parameterization of
a continuous curve without endpoints,  where $\alpha=-\infty$ and/or   $\beta=+\infty$ are possible.
That is, for each choice of reals $\theta_1, \theta_2$ obeying  $\alpha<\theta_1\leq \theta_2<\beta$ the condition \eqref{Bpara4} is supposed to be satisfied. If  ${\mathcal{C}}|I$ is a geodesic curve for each compact subinterval $I\subset  ]\alpha,\beta\,[$, and if ${\mathcal{C}}$ is maximal with this property, then ${\mathcal{C}}$ is said  to be a `geodesic'.  It is easily seen that to each shortest path $\gamma\in {\mathcal{C}}^{\nu,\varrho}(M)$ there exists a geodesic ${\mathcal{C}}$ to which the inner of $\gamma$ belongs, that is, which is obeying $|\gamma|\backslash \{\nu,\varrho\}\subset |{\mathcal{C}}|$. Especially, if $\gamma$ is admitting a geodesic extension beyond both its endpoints $\nu$ and $\varrho$, then $\gamma$ can be extended to a geodesic. Thereby, according to Corollary \ref{nobipass}, in any case the geodesic associated to $\gamma$ will be uniquely determined. We remark that the study of a geodesic  \eqref{geodesic} is equivalent to the study of the two halfgeodesics ${\mathcal{C}}|]\alpha,\delta]$ and ${\mathcal{C}}|[\delta,\beta\,[$, for each $\delta$ with $\alpha<\delta<\beta$.

Note that locally around each state $\nu_\theta$ a geodesic \eqref{geodesic} looks like a geodesic arc connecting some neighbouring states $\nu_{\theta_1}$ and $\nu_{\theta_2}$, with $\alpha<\theta_1<\theta<\theta_2<\beta$. Accordingly, the normal parameterization \eqref{para4} implies that $\nu_\theta$ has to obey the dynamical equation
\begin{equation}\label{para7a}
\frac{1}{4}\,\nu^{\,\prime\prime\prime}_\theta +\nu^{\,\prime}_\theta={\mathsf 0}
\end{equation}
 on ${\mathcal S}(M)$, for each $\theta\in ]\alpha,\beta\,[$.
The first integral of this equation then reads
\begin{equation}\label{para7b}
\frac{1}{4}\,\nu^{\,\prime\prime}_\theta +\nu_\theta=\mu_{\mathcal{C}}
\end{equation}
with a state $\mu_{\mathcal{C}}\in {\mathcal S}(M)$ given by (see Corollary \ref{normalpara} for the context)
\begin{equation}\label{para8}
\mu_{\mathcal{C}}=\frac{1}{2\,\sin^2\Delta\theta}\Bigl(\nu+\varrho-2\,\cos\Delta\theta\, \re f^\pi_{\zeta,\varphi}\Bigr)
\end{equation}
and where $\nu=\nu_{\theta_1}$, $\varrho=\nu_{\theta_2}$ and $\Delta\theta=\theta_2-\theta_1$ may be chosen in accordance  with  $\alpha<\theta_1<\theta<\theta_2<\beta$ and ${\mathcal{C}}|[\theta_1,\theta_2]$ to be a shortest path. Note that under these circumstances   $F=F(M|\nu,\varrho)=\cos\Delta\theta$ holds. The fact that \eqref{para8} is a state follows since $0\leq F< 1$ holds and
$ f^\pi_{\zeta,\varphi}\in {\mathfrak F}(M|\nu,\varrho)$ is satisfied,
see Definition \ref{Fset}.
\begin{remark}\label{osc}
\begin{enumerate}
\item \label{osc1}
Refer to  \eqref{para8} as `osculating center' of the geodesic ${\mathcal{C}}$, and which in view of \eqref{para7b} extends the notion of the `center of the osculating circle' known from classical curve theory to a context of states.
\item\label{osc3}
In line with Lemma \ref{bifurc}\,(c) and formula \eqref{para8}, shortest paths with the same two endpoints $\nu,  \varrho$ may be uniquely distinguished by the respective osculating centers of the geodesics which are associated to the respective shortest paths.
\end{enumerate}
\end{remark}

\subsubsection{Convexity and geodesic convexity}\label{geoconvexity}
There exist various notions of convexity for subsets in context of spaces with inner metrics, see \cite[\S 18.]{Rino:61} and the references cited therein, and which terms are dating back to works of K. Menger \cite{Meng:30}, W. Wald \cite{Wald:33} and A.D. Alexandrow \cite{Alex:55}, essentially. Also, in modified form some of these notions are common in context of Riemannian manifolds  \cite{Whit:32,Chav:93}. Those types of convexities have to be well-distinguished from the property  of (affine) convexity of a set and referring to a possibly underlying affine-linear structur of the metric space in question. In view of Theorem  \ref{gearcuni},  Lemma \ref{bifurc}\,(c)--(d) and Definition \ref{geodisj}, in case of the Bures metric, since geodesic arcs instead of shortest paths  synonymously can be referred to, the various types of convexities may be expressed in terms of geodesic arcs and then definitions read as follows.
\begin{definition}\label{geoconvex}
A  subset  $\Omega\subset {\mathcal S}(M)$ consisting of more than one state is termed
\begin{enumerate}
\item \label{geoconvex00}
 `midpoint convex' if for any choice $\nu, \varrho\in \Omega$ with $\nu\not=\varrho$ $$\ \ \ \ \ \ \ \ \ \ \ \ \ \ \, \exists \mu\in \Omega :\ {\mathfrak d}_B(M|\nu,\mu)=\frac{1}{2}\,{\mathfrak d}_B(M|\nu,\varrho),\ {\mathfrak d}_B(M|\mu,\varrho)=\frac{1}{2}\,{\mathfrak d}_B(M|\nu,\varrho);\ \ \ \ \ \ \ $$
\item \label{geoconvex0}
 `metrically convex' if for any choice $\nu, \varrho\in \Omega$ with $\nu\not=\varrho$ $$\ \ \ \ \ \ \ \ \ \ \ \ \ \ \, \exists \mu\in \Omega :\ {\mathfrak d}_B(M|\nu,\mu)=\lambda\,{\mathfrak d}_B(M|\nu,\varrho),\ {\mathfrak d}_B(M|\mu,\varrho)=(1-\lambda)\,{\mathfrak d}_B(M|\nu,\varrho)\ $$
 for some $\lambda\in ]0,1[$;
\item \label{geoconvex1}
 `convex' if for any choice $\nu, \varrho\in \Omega$ with $\nu\not=\varrho$
\[
\exists \gamma\in {\mathcal{C}}^{\nu,\varrho}(M) :\ |\gamma|\subset \Omega;\phantom{\ \ \ \lambda\, ,\ {\mathfrak d}_B(M|\mu,\varrho)=(1-\lambda)\,{\mathfrak d}_B(M|\nu,\varrho)\ }
\]
\item \label{geoconvex2}
`totally convex' if  for any $\nu, \varrho\in \Omega$ with $\nu\not=\varrho$
\[
\forall \gamma\in{\mathcal{C}}^{\nu,\varrho}(M):\ |\gamma|\subset \Omega;\phantom{\ \ \ \lambda\, ,\ {\mathfrak d}_B(M|\mu,\varrho)=(1-\lambda)\,{\mathfrak d}_B(M|\nu,\varrho)\ }
\]
\item \label{geoconvex3}
`geodesically convex' if $\Omega$ is a convex, arc-determining subset.
\end{enumerate}
\end{definition}
Typical examples of these notions come along with the considerations around Radon-Nikodym theorems, see \eqref{stratum} and subsection \ref{lstrucarcs}.
\begin{example}\label{exccconv}
\begin{enumerate}
\item \label{exccconv2}
The inner of each geodesic arc is    geodesically convex.
\item \label{exccconv1}
The closure  of the  $\tau$-stratum $\Omega_M(\tau)$ is totally convex, for any  state $\tau$.
\end{enumerate}
\end{example}
\begin{proof}
 The inner $|\gamma|\backslash \{\nu,\varrho\}$  of a geodesic arc $\gamma\in {\mathcal{C}}^{\nu,\varrho}(M)$ is  satisfying
$
|\gamma|\backslash \{\nu,\varrho\}=\cup_{n\in {{\mathbb{N}}}} |\gamma_n|
$
with $\gamma_n=\gamma|[1/n,1-1/n]$, for each $n\in {{\mathbb{N}}}$. In view of Example \ref{geodisj2}\,\eqref{geodisj2a} the  validity of  \eqref{exccconv2} follows.
Note that, by Lemma \ref{setprop}, each stratum $\Omega_M(\tau)$ is obeying  $$[\Omega_M(\tau)]=\Omega^0_M(\tau)$$ see \eqref{stratum} for  definitions. In view of Definition \ref{geoconvex}\,\eqref{geoconvex2} and Corollary \ref{geoconv} the validity of \eqref{exccconv1} then is following at once.
\end{proof}
Consider the following non-trivial example of geodesic convexity  for $M={\mathsf B}({\mathcal H})$.
\begin{example}\label{Bopex}
Let ${\mathcal P}=\{P_k\}$ be a finite or countably infinite, strictly ascending directed system of finite orthoprojections $P_k$ over ${\mathcal H}$, with  ${\mathsf{l.u.b.}} \{P_k\}={\mathsf 1}$. Then,
\begin{equation*}
\Omega_{\mathcal P}({\mathcal H})=\bigl\{\varrho\in {\mathcal S}_0^{\mathsf{faithful}}({\mathsf B}({\mathcal H})): \sigma_\varrho P_k=P_k \sigma_\varrho,\,\forall k \bigr\}
\end{equation*}
is a geodesically convex subset of states over $M={\mathsf B}({\mathcal H})$ (by $\sigma_\varrho$ the density operator over ${\mathcal H}$ which is uniquely associated with $\varrho$ is meant).
\end{example}
\begin{proof}
By Corollary \ref{geodisj1}, since $\Omega_{\mathcal P}({\mathcal H})$ consists of faithful normal states, it has to be an arc-determining set. Accordingly, for each two states  $\varrho,\nu\in\Omega_{\mathcal  P}({\mathcal H})$, when considering the $^*$-representation $\pi$ with  bounded linear operators acting by left multiplication on Hilbert-Schmidt operators over ${\mathcal H}$, then in respect of $\pi$ the states of the unique geodesic arc $\gamma\in {\mathcal{C}}^{\nu,\varrho}({\mathsf B}({\mathcal H}))$ can be implemented by the (normalized) positive linear combinations of the Hilbert-Schmidt operators $\sqrt{\sigma_\nu}$ and $\xi\bigl(\sqrt{\sigma_\nu}\bigr)$ figuring in context of Example \ref{Bop}. Especially, the state $\omega$ of $\gamma$ which gets  implemented by the vector  proportional to $$\alpha\,\xi\bigl(\sqrt{\sigma_\nu}\bigr)+\beta\sqrt{\sigma_\nu}$$
with $\alpha\geq 0$, $\beta>0$, by assumption about $\varrho$ and $\nu$ and by Example \ref{Bop1} and according to Remark \ref{Bop1c}\,\eqref{Bop1cc} has full support and its density operator has to commute with each $P_k$ again.   Thus, each state $\omega\in |\gamma|$ has to belong to $\Omega_{\mathcal P}({\mathcal H})$ again. But then, $|\gamma|\subset \Omega_{\mathcal P}({\mathcal H})$ for any $\nu, \varrho\in \Omega_{\mathcal P}({\mathcal H})$, i.e.~ geodesic convexity follows.
\end{proof}
\begin{remark}\label{Bopex1}
$\Omega_{\mathcal P}({\mathcal H})$ is an example of a non-closed, affinely convex set, which in addition is geodesically convex in respect of the Bures metric. Thereby, the dimensionality of $\Omega_{\mathcal P}({\mathcal H})$ is the same as the dimensionality of the underlying Hilbert space ${\mathcal H}$ is (for this see the remark at the beginning of the proof of Example \ref{Bop1}).
\end{remark}
There is a natural gradation in strength among these types of convexities. In particular, each implication which is in accordance with the following graph
\begin{equation*}
\begin{array}{ccccccc}
&&&&&& \text{midpoint convexity}\\
 \text{geodesic convexity} & \Rightarrow &\text{total convexity}& \Rightarrow &\text{convexity}& \stackrel{\rotatebox{45}{$\Rightarrow$}}{\rotatebox{-45}{$\Rightarrow$}}& \Downarrow \\
 &&&&&& \text{metric convexity}
\end{array}
\end{equation*}
 holds true.
When equipped with the Bures metric the state space ${\mathcal S}(M)$ itself is a convex metric space (and is totally convex as a set even), and which then especially implies metric convexity. More generally, it is  known that in a complete metric space with inner metric a closed subset is convex if, and only if, it is metrically convex (or equivalently, iff it is midpoint convex).
Also, in certain cases of non-closed subsets the mentioned equivalence between convexity and midpoint convexity may persist to hold. As an interesting case the example of a stratum $\Omega_M(\tau)$, $\tau\in {\mathcal S}(M)$, will be considered now.

 To start with, suppose $\gamma\in {\mathcal{C}}^{\nu,\varrho}(M)$, with  $\nu,\varrho\in \Omega_M(\tau)$, and be $\{\pi,{\mathcal H}_\pi\}$ a unital $^*$-representation with non-trivial $\pi$-fibres of $\nu$ and $\varrho$. Let $\varphi\in {\mathcal S}_{\pi,M}(\nu)$ and $\xi\in {\mathcal S}_{\pi,M}(\varrho)$ such that $\langle\xi,\varphi\rangle_\pi=F(M|\nu,\varrho)$. Let $N=\pi(M)^{\,\prime\prime}$, and be $A_0$ the densely defined over $[N^{\,\prime}\varphi]$ linear operator given by $A_0:  z\varphi\longmapsto z\xi$, for all $z\in N^{\,\prime}$. The following holds.
\begin{lemma}\label{beicconv1} Suppose $\gamma\in {\mathcal{C}}^{\nu,\varrho}(M)$, with  $\nu,\varrho\in \Omega_M(\tau)$. Then,
\begin{equation}\label{geoalter}
|\gamma|\cap \Omega_M(\tau)=
\begin{cases}
\phantom{\,\nu}|\gamma| & \text{ if $A_0$ is essentially self-adjoint;}\\ \\
\{\nu,\varrho\} & \text{else.}
\end{cases}
\end{equation}
\end{lemma}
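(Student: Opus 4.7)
The plan is to reduce the statement to the essential self-adjointness criterion for $A_0$ supplied by Lemma \ref{supp5} and Corollary \ref{supp5neg}, applied to the implementing family \eqref{geo0a} of the geodesic arc. Fix the parameterization $\gamma:[0,1]\ni t\longmapsto \nu_t$ implemented by $\varphi_t = t\,\xi + \lambda(t)\,\varphi$ with $\xi\in {\mathcal S}_{\pi,M}(\varrho)$, $\varphi\in {\mathcal S}_{\pi,M}(\nu)$ chosen such that $\langle\xi,\varphi\rangle_\pi = F(M|\nu,\varrho)$. Since $\nu,\varrho\in \Omega_M(\tau)$ are mutually absolutely continuous, Theorem \ref{suppabsstet} gives $s(\nu_\pi) = s(\varrho_\pi)$, which by \eqref{traeger} means $p_\pi(\xi) = p_\pi(\varphi)$; in particular $A_0$ is well defined on $[N^{\,\prime}\varphi] = p\,{\mathcal H}_\pi$ with $p=p_\pi(\varphi)$, and both $\xi$ and $\varphi$ lie in $p\,{\mathcal H}_\pi$. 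For each inner parameter $t \in (0,1)$, the membership $\nu_t\in \Omega_M(\tau)$ is equivalent to $\nu_t\dashv\vdash \nu$, and by Theorem \ref{suppabsstet} this in turn amounts to the support equality $p_\pi(\varphi_t) = p$.

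The technical core is to show that this equality is controlled by the essential self-adjointness of $A_0$, uniformly in $t\in(0,1)$. Since $t>0$ and $\lambda(t)>0$ on $(0,1)$, the scaling $\varphi_t = t\,(\xi + \mu\varphi)$ with $\mu = \lambda(t)/t>0$ yields $p_\pi(\varphi_t) = p_\pi(\xi + \mu\varphi)$. For $z\in N^{\,\prime}$ one has $z(\xi + \mu\varphi) = (A_0 + \mu p)\,z\varphi$, so that $[N^{\,\prime}(\xi+\mu\varphi)] = [{\mathcal R}(A_0 + \mu p)]$ is the closure of the range of $A_0 + \mu p$. Applying Remark \ref{saka6} to the positive symmetric operator $A_0/\mu$ (which is essentially self-adjoint precisely when $A_0$ is), the density of this range in $p\,{\mathcal H}_\pi$ is equivalent to essential self-adjointness of $A_0$ and is independent of the choice $\mu>0$. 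Hence $p_\pi(\varphi_t) = p$ for every $t\in(0,1)$ if $A_0$ is essentially self-adjoint, and fails strictly for every $t\in(0,1)$ otherwise.

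With this uniform equivalence in hand the case split is immediate. If $A_0$ is essentially self-adjoint, then $\nu_t\dashv\vdash \nu$ for every inner $t$, so $|\gamma|\subset \Omega_M(\tau)$ and $|\gamma|\cap \Omega_M(\tau) = |\gamma|$. Otherwise, $p_\pi(\varphi_t) < p$ for every inner $t$, so $s((\nu_t)_\pi) < s(\nu_\pi)$ by \eqref{traeger}, whence $\nu\not\dashv \nu_t$ by Theorem \ref{suppabsstet}, and no inner state of $\gamma$ lies in the stratum; hence $|\gamma|\cap \Omega_M(\tau) = \{\nu,\varrho\}$. The principal obstacle is verifying the $\mu$-independence of the support criterion; this is conceptually guaranteed by Remark \ref{saka}\,\eqref{saka5}, but I would prefer to derive it directly from Remark \ref{saka6} via the rescaling above, which makes the reduction from $\varphi_t$ to the canonical combination $\xi + \mu\varphi$ fully transparent.
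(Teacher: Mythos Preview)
Your proposal is correct and follows essentially the same route as the paper: both reduce the support equality $p_\pi(\varphi_t)=p_\pi(\varphi)$ for inner $t$ to the essential self-adjointness criterion of Lemma~\ref{supp5}/Remark~\ref{saka6}, using that $\varphi_t$ is a positive combination of $\xi$ and $\varphi$ and that the criterion is invariant under rescaling of $A_0$. The only cosmetic difference is that the paper writes $\varphi_t$ as proportional to $\varepsilon\xi+\varphi$ and invokes Lemma~\ref{supp5} for $\varepsilon A_0$, whereas you write it as $\xi+\mu\varphi$ and apply Remark~\ref{saka6} to $A_0/\mu$---these are the same argument.
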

\begin{proof}
Suppose $\nu,\varrho\in \Omega_M(\tau)$, $\nu\not=\varrho$. Following Example \ref{geodisj}\,\eqref{geodisj2c},  the geodesic arc $\gamma\in {\mathcal{C}}^{\nu,\varrho}(M)$ in quest has to be uniquely determined. Let  $\{\pi,{\mathcal H}_\pi\}$ be a unital $^*$-representation  and $\xi\in {\mathcal S}_{\pi,M}(\varrho)$, $\varphi\in {\mathcal S}_{\pi,M}(\nu)$ obeying $\langle\xi,\varphi\rangle_\pi=F(M|\nu,\varrho)=F$. Then, with respect to $\pi$ the geodesic arc  $\gamma:[0,1]\ni t\longmapsto\nu_t$ at each parameter value $t$ gets implemented by the vectors  $\varphi_t=t\,\xi+\lambda(t)\,\varphi$. Let $\omega\in |\gamma|$ be the midpoint state of $\gamma$. Then $\omega=\nu_s$ has to be fulfilled,  with the unique $s\in ]0,1[$ satisfying $s=\lambda(s)$, that is, $\varphi_s=\alpha (\xi+\varphi)$ holds, with $\alpha^2=1/2(1-F)$. Note that for a stratum by definition \eqref{stratum0} and in view of Theorem \ref{absstet} we have $\omega\in \Omega_M(\tau)$ if, and only if $p_\pi(\xi+\varphi)=p_\pi(\varphi)=p_\pi(\xi)$. By Lemma \ref{supp5} it is known that this can happen if, and only if, the densely defined, positive linear operator $A_0$ given in   \eqref{ope} is essentially self-adjoint. Trivially, the latter is the same as requiring $\varepsilon A_0$ to be  essentially self-adjoint, for any given fixed $\varepsilon > 0$. Thus, by Lemma \ref{supp5} once more again, $p_\pi(\xi+\varphi)=p_\pi(\varphi)$ is equivalent to  $p_\pi(\varepsilon\,\xi+\varphi)=p_\pi(\varphi)$, for any given fixed  $\varepsilon > 0$. Note that since each of the vectors $\varphi_t$, with $0<t<1$, is proportional to $\varepsilon\,\xi+\varphi$, for some suitably chosen $\varepsilon >0$, this means that $p_\pi(\varphi_s)=p_\pi(\varphi)$ is equivalent to   $p_\pi(\varphi_t)=p_\pi(\varphi)$.  Thus, $\omega\in \Omega_M(\tau)$ and $\nu_t\in \Omega_M(\tau)$ are mutually equivalent assertions, for any $t\in ]0,1[$. Hence, if  $|\gamma|\cap \Omega_M(\tau)\supsetneq
\{\nu,\varrho\}$ then $|\gamma|\cap \Omega_M(\tau)=|\gamma|$, and the latter happens iff $\omega\in \Omega_M(\tau)$. As mentioned, the latter condition is the same as claiming  $A_0$ to be essentially self-adjoint. Thus \eqref{geoalter} is seen to be true.
\end{proof}
\begin{corolla}\label{beicconv}
$\Omega_M(\tau)$ is convex iff it is metrically \textup{(}or midpoint\textup{)} convex.
\end{corolla}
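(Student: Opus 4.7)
The implications from convexity to both metric and midpoint convexity are the general ones already recorded in the gradation graph following Definition \ref{geoconvex}: if $|\gamma|\subset\Omega_M(\tau)$ for some $\gamma\in{\mathcal{C}}^{\nu,\varrho}(M)$, then the state corresponding to any parameter value $\theta\in\,]0,\theta_0[$ of the normal parameterization of $\gamma$ produces a witness for metric convexity (with $\lambda=\theta/\theta_0$), and in particular the midpoint state ($\lambda=1/2$) witnesses midpoint convexity. The non-trivial content is therefore the reverse implication, and the plan is to reduce both the metric and the midpoint cases to a single argument based on Lemma \ref{besonders} together with the dichotomy furnished by Lemma \ref{beicconv1}.

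Concretely, assume $\Omega_M(\tau)$ is metrically convex (the midpoint case is literally a specialization with $\lambda=1/2$), and pick $\nu,\varrho\in\Omega_M(\tau)$ with $\nu\neq\varrho$. By hypothesis there exist $\mu\in\Omega_M(\tau)$ and $\lambda\in\,]0,1[$ with ${\mathfrak d}_B(M|\nu,\mu)=\lambda\,{\mathfrak d}_B(M|\nu,\varrho)$ and ${\mathfrak d}_B(M|\mu,\varrho)=(1-\lambda)\,{\mathfrak d}_B(M|\nu,\varrho)$. Apply Lemma \ref{besonders}: this state $\mu$ must lie on the supporting set of some geodesic arc $\gamma\in{\mathcal{C}}^{\nu,\varrho}(M)$. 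Since any stratum is an arc-determining set by Example \ref{geodisj2}\,\eqref{geodisj2c}, the geodesic arc joining $\nu$ and $\varrho$ is unique, and hence $\mu\in|\gamma|$ for this distinguished $\gamma$. Moreover $\mu\neq\nu,\varrho$ (because $\lambda\in\,]0,1[$ forces the two distances above to be strictly positive and strictly less than ${\mathfrak d}_B(M|\nu,\varrho)$), so $\mu$ is an inner state of $\gamma$.

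At this point Lemma \ref{beicconv1} closes the argument: it states that $|\gamma|\cap\Omega_M(\tau)$ is either $\{\nu,\varrho\}$ or all of $|\gamma|$. Since we have just produced an interior point $\mu$ of $\gamma$ lying in $\Omega_M(\tau)$, the first alternative is excluded, hence $|\gamma|\subset\Omega_M(\tau)$. As $\nu,\varrho\in\Omega_M(\tau)$ were arbitrary, the stratum is convex, proving the reverse implication.

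There is no real obstacle in the argument; it is a structural consequence of two previously established facts (unique geodesic between two states in the same stratum, and the all-or-nothing behavior of $|\gamma|\cap\Omega_M(\tau)$). The only point requiring a moment of attention is the verification that the witness $\mu$ furnished by metric or midpoint convexity is genuinely an interior point of the unique geodesic arc, so that the second branch of Lemma \ref{beicconv1}'s dichotomy is the one that applies; this follows at once from $\lambda\in\,]0,1[$ together with the fact that ${\mathfrak d}_B$ is a genuine metric (Lemma \ref{dreieck}) and ${\mathfrak d}_B(M|\nu,\varrho)>0$.
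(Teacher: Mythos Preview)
Your proof is correct and follows essentially the same route as the paper: invoke the witness $\mu$ from metric (or midpoint) convexity, use Lemma \ref{besonders} together with the fact that strata are arc-determining (Example \ref{geodisj2}\,\eqref{geodisj2c}) to place $\mu$ strictly inside the unique geodesic arc $\gamma\in{\mathcal{C}}^{\nu,\varrho}(M)$, and then conclude via the dichotomy of Lemma \ref{beicconv1}. The only difference is expository detail; the logical skeleton is identical.
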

\begin{proof}
By the previous, if $\Omega=\Omega_M(\tau)$ is chosen in Definition \ref{geoconvex}, we have to show that the implication  $\eqref{geoconvex0}\Rightarrow\eqref{geoconvex1}$ is true.
In line with this, suppose  $\Omega_M(\tau)$ to be metrically convex. For $\nu, \varrho\in \Omega_M(\tau)$ with $\nu\not=\varrho$, let $\mu\in \Omega_M(\tau)$ be chosen in accordance with  the defining condition for metric convexity by Definition \ref{geoconvex}\, \eqref{geoconvex0}. Then,  by Lemma \ref{besonders} and due to Example \ref{geodisj}\,\eqref{geodisj2c}, $\mu\in |\gamma|\backslash \{\nu,\varrho\}$ has to be fulfilled, for the unique geodesic arc $\gamma$ in $ {\mathcal{C}}^{\nu,\varrho}(M)$. Thus, since  $|\gamma|\cap\, \Omega_M(\tau)\not=\{\nu,\varrho\}$ is seen,  in view of Lemma \ref{beicconv1} $|\gamma|\subset \Omega_M(\tau)$ follows, and which is condition \eqref{geoconvex1} of  Definition \ref{geoconvex}. Thus, in supposing metric convexity for  $\Omega_M(\tau)$, convexity in the sense of Definition \ref{geoconvex}\,\eqref{geoconvex1} follows.
\end{proof}
In order to see examples of convex strata it is useful to introduce a class of states on a ${\mathsf C}^*$-algebra $M$ called `finite-type states'.
Consider a state $\tau$ together with corresponding cyclic  $^*$-representation $\pi_\tau$  generated by $\tau$ and acting over the cyclic representation  Hilbert space  ${\mathcal H}_{\pi_\tau}=[\pi_\tau(M)\psi]$, with cyclic vector $\psi_\tau\in {\mathcal S}_{\pi_\tau,M}(\tau)$. $\tau$ is said to be a finite-type state (resp.~a state of finite type), if $p_{\pi_\tau}(\psi_\tau)$ is a finite orthoprojection in $\pi_\tau(M)^{\,\prime\prime}$. Remind that this is the same as requiring that the unique $\tau$-implementing vector state $\tau_{\pi_\tau}$ over $N_\tau$ has finite support $s(\tau_{\pi_\tau})=p_{\pi_\tau}(\psi_\tau)$.
\begin{lemma}\label{ft}
If $\tau$ is a finite-type state,  then each state of $\Omega_M(\tau)$ is of finite type, too. Especially, for $\varrho,\nu\in \Omega_M(\tau)$,  with finite-type $\tau$, there are a unital $^*$-representation $\{\pi,{\mathcal H}_\pi\}$ and cyclic vectors $\varphi$ and $\xi$ within ${\mathcal H}_\pi$ obeying  $\varphi\in {\mathcal S}_{\pi,M}(\nu)$, $\xi\in {\mathcal S}_{\pi,M}(\varrho)$, with $p_\pi(\varphi)=p_\pi(\xi)$ being finite  and  $\langle\xi,\varphi\rangle_\pi=F(M|\varrho,\nu)$.
\end{lemma}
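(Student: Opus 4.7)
The plan is to proceed in two stages matching the two assertions. For Part~1, I will work in the GNS representation $\{\pi_\tau,\mathcal H_{\pi_\tau},\psi_\tau\}$ of $\tau$. Writing $N:=\pi_\tau(M)''$ and $e:=p_{\pi_\tau}(\psi_\tau)=s(\tau_{\pi_\tau})$, the finite-type hypothesis says $e$ is finite in $N$. For any $\omega\in\Omega_M(\tau)$, Theorem~\ref{suppabsstet} applied to $\omega\dashv\tau$ provides $\xi\in\mathcal S_{\pi_\tau,M}(\omega)$, and the stratum condition gives $s(\omega_{\pi_\tau})=s(\tau_{\pi_\tau})=e$. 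The GNS of $\omega$ will be identified with the cyclic subrepresentation of $\pi_\tau$ on $[\pi_\tau(M)\xi]=p^{\,\prime}_{\pi_\tau}(\xi)\mathcal H_{\pi_\tau}$, so that $\pi_\omega(M)''$ is the reduced algebra $N\,p^{\,\prime}_{\pi_\tau}(\xi)$ and $s(\omega_{\pi_\omega})=e\cdot p^{\,\prime}_{\pi_\tau}(\xi)$; the standard Murray--von~Neumann fact that the product of a finite projection of $N$ with a projection of $N^{\,\prime}$ is finite in the reduced algebra then yields finite-type for $\omega$.

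For Part~2 I keep $\{\pi_\tau,\psi_\tau\}$ and invoke Araki's generalised Radon--Nikodym theorem (Theorem~\ref{suppabsstet}\eqref{supp3}): since $\nu,\varrho\dashv\tau$, there exist positive self-adjoint operators $A,B$ affiliated with $N$, with $eA=Ae=A$, $eB=Be=B$, $s(A)=s(B)=e$, so that $\varphi:=A\psi_\tau\in\mathcal S_{\pi_\tau,M}(\nu)$ and $\xi_0:=B\psi_\tau\in\mathcal S_{\pi_\tau,M}(\varrho)$. I will show these vectors are cyclic for $\pi_\tau(M)$: given a projection $q\in N^{\,\prime}$ with $q\varphi=\varphi$, the affiliation relation $qA\subset Aq$, together with $s(A)=e$, $e\psi_\tau=\psi_\tau$, and $eq=qe$, forces $q\psi_\tau=\psi_\tau$, hence $q=1$ by cyclicity of $\psi_\tau$. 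Moreover $p_{\pi_\tau}(\varphi)=s(A)=e=s(B)=p_{\pi_\tau}(\xi_0)$ is finite.

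The remaining and main obstacle is to adjust $\xi_0$ by a unitary $u\in\mathcal U(N^{\,\prime})$ that realises the fidelity. By Lemma~\ref{bas3}\eqref{bas3aa} I need $|h(u)|=F(M|\varrho,\nu)$ for $h:=h^{\pi_\tau}_{\xi_0,\varphi}$. Because $\varphi,\xi_0\in e\mathcal H_{\pi_\tau}$, $h$ factors through the surjective normal $*$-homomorphism $\rho\colon N^{\,\prime}\to(eNe)^{\,\prime}$, $y\mapsto y|_{e\mathcal H_{\pi_\tau}}$. The induced form $\tilde h$ on $(eNe)^{\,\prime}$ lives in a finite von Neumann algebra: $eNe$ is finite because $e$ is finite in $N$, the vector $\psi_\tau$ is cyclic and separating for $eNe$ on $e\mathcal H_{\pi_\tau}$ (using that the restriction of $\tau_{\pi_\tau}$ to $eNe$ is faithful), hence $eNe$ is in standard form and $(eNe)^{\,\prime}$ is anti-isomorphic to $eNe$ via modular conjugation, so also finite. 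In a finite von Neumann algebra the relation $p\sim q$ implies $p^\perp\sim q^\perp$, so the partial isometry from the polar decomposition of $\tilde h$ extends inside $(eNe)^{\,\prime}$ to a unitary $\tilde u$ at which $|\tilde h|$ attains its norm. The final technical step is to lift $\tilde u$ to a unitary in $N^{\,\prime}$: since $\rho$ is a surjective normal $*$-homomorphism of von Neumann algebras, structure theory identifies $\rho$ as reduction by the central projection $z_0:=c[e]\in Z(N^{\,\prime})$, giving $\ker\rho=N^{\,\prime}(1-z_0)$ and an isomorphism $N^{\,\prime}z_0\cong(eNe)^{\,\prime}$; then $u:=\tilde u+(1-z_0)\in N^{\,\prime}$ is manifestly unitary and satisfies $\rho(u)=\tilde u$. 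Finally $\xi:=u\xi_0$ is cyclic (unitaries in $N^{\,\prime}$ preserve cyclicity), implements $\varrho$, has finite $p_{\pi_\tau}(\xi)=e$, and $\langle\xi,\varphi\rangle_{\pi_\tau}=h(u)=\tilde h(\tilde u)=F(M|\varrho,\nu)$, completing the construction.
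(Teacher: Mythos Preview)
Your argument is correct, but one technical point needs a sentence of justification: you invoke Theorem~\ref{suppabsstet} directly in the GNS representation $\pi_\tau$ to produce implementing vectors for $\omega,\nu,\varrho$, yet that theorem presupposes that both $\pi$-fibres are already non-void. The standard remedy is to pass first through a joint representation $\pi_0$ (say the GNS of $(\tau+\omega)/2$), obtain $\xi_0=A\psi$ there, observe that $\xi_0\in p^{\,\prime}_{\pi_0}(\psi)\mathcal H_{\pi_0}=[\pi_0(M)\psi]$ by affiliation of $A$, and then transport everything to $\pi_\tau$ via the GNS-uniqueness isomorphism between $\{\pi_0|_{[\pi_0(M)\psi]},\psi\}$ and $\{\pi_\tau,\psi_\tau\}$. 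After this one-line fix your reasoning in both parts goes through as written. (A tiny slip: ``$\tilde u$ at which $|\tilde h|$ attains its norm'' should read ``at which $\tilde h$ attains its norm''.)

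The paper's route is shorter and organised differently. It begins in a joint representation $\pi_0$ of all three states, picks $\varphi\in\mathcal S_{\pi_0,M}(\nu)$ arbitrarily, and then chooses $\xi$ and $\psi$ as the \emph{fidelity-optimal} partners of $\varphi$ via Corollary~\ref{uniqueimp}; the stratum hypothesis forces $\varrho_\nu=\nu_\varrho=\tau_\nu=\nu_\tau={\mathsf 0}$, so Corollary~\ref{uniqueimp}\,\eqref{uniqueimp3a} yields $p^{\,\prime}_{\pi_0}(\xi)=p^{\,\prime}_{\pi_0}(\varphi)=p^{\,\prime}_{\pi_0}(\psi)$ outright. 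Cutting $\pi_0$ down by this common projection $z$ makes all three vectors cyclic at once, identifies the reduced representation with $\pi_\tau$ by GNS-uniqueness, and finiteness of $p_\pi(\psi)$ simply transfers from $p_{\pi_\tau}(\psi_\tau)$. The fidelity condition $\langle\xi,\varphi\rangle_\pi=F(M|\varrho,\nu)$ thus comes for free and survives the cut-down, so there is no need for your reduction to the finite commutant $(eNe)^{\,\prime}$, the modular-theoretic argument that $(eNe)^{\,\prime}$ is finite, or the unitary lift through the central support $c[e]$. Your approach, by contrast, stays inside $\pi_\tau$ throughout and makes the Radon--Nikodym mechanism and the two distinct uses of finiteness of $e$ (once for Part~1 via the reduced algebra $Nq$, once for the fidelity via $(eNe)^{\,\prime}$) fully explicit; it is longer but arguably more transparent about where each hypothesis enters.
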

\begin{proof}
Let $\{\pi_0,{\mathcal H}_0\}$ be a  unital $^*$-representation such that the $\pi_0$-fibres of $\tau$, $\nu$ and $\varrho$ are non-trivial. Let $N_0=\pi_0(M)^{\,\prime\prime}$. Also, since $\tau, \nu, \varrho$ belong to the same stratum,  $\varrho_\nu=\nu_\varrho={\mathsf 0}$ and $\tau_\nu=\nu_\tau={\mathsf 0}$ hold, and to given $\varphi\in {\mathcal S}_{\pi_0,M}(\nu)$ there exist uniquely determined $\xi\in {\mathcal S}_{\pi_0,M}(\varrho)$ and $\psi\in {\mathcal S}_{\pi_0,M}(\tau)$ obeying  $\langle\xi,\varphi\rangle_{\pi_0}=F(M|\varrho,\nu)$ and  $\langle\psi,\xi\rangle_{\pi_0}=F(M|\tau,\varrho)$, respectively, see Lemma \ref{diffreteile}, Corollary \ref{uniqueimp}, \eqref{stratum0} and Example \ref{geodisj2}\,\eqref{geodisj2c}.  Especially, on applying Corollary  \ref{uniqueimp}  with respect to the pairs of states $\{\varrho, \nu\}$ and $\{\tau, \nu\}$ respectively, on   $N_0^\prime$ the following can be inferred to hold:
$$p_{\pi_0}^{\,\prime}(\xi)=s(h^{\pi_0}_{\xi,\varphi})=p_{\pi_0}^{\,\prime}(\varphi)\text{ and } p_{\pi_0}^{\,\prime}(\psi)=s(h^{\pi_0}_{\psi,\varphi})=p_{\pi_0}^{\,\prime}(\varphi)$$
Hence, we have  $p_{\pi_0}^{\,\prime}(\xi)=p_{\pi_0}^{\,\prime}(\varphi)=p_{\pi_0}^{\,\prime}(\psi)$. Let us denote the common orthoprojection by $z\in N_0^\prime$ and define ${\mathcal H}_\pi=z{\mathcal H}_0$ and $$\pi: M\ni x\longmapsto \pi_0(x)z$$ Then,  $\pi$ is a unital $^*$-representation of $M$ over ${\mathcal H}_\pi$ with cyclic vectors $\varphi,\xi, \psi\in {\mathcal H}_\pi$. Also, by the $vN$-density theorem  $N=\pi(M)^{\,\prime\prime}=N_0 z$ is seen. Especially, since $\psi\in {\mathcal S}_{\pi,M}(\tau)$ holds, the linear mapping
$ \pi(x)\psi\longmapsto\pi_\tau(x)\psi_\tau$,
with $x$ running through all of $M$, owing to ${\mathcal H}_\pi=[\pi(M)\psi]$ and ${\mathcal H}_{\pi_\tau}=[\pi_\tau(M)\psi_\tau]$   uniquely extends to an unitary mapping $U\in {\mathsf B}({\mathcal H}_\pi,{\mathcal H}_{\pi_\tau})$ such that $U\pi\, U^*=\pi_\tau$.
Therefore, $U N U^*=N_\tau$, and $U p_\pi(\psi) U^*=p_{\pi_\tau}(\psi_\tau)$. Since by assumption   $p_{\pi_\tau}(\psi_\tau)$ is a finite orthoprojection of $N_\tau$, $p_\pi(\psi)$ is a finite orthoprojection of $N$, too.  Finally, owing to $\psi\in {\mathcal S}_{\pi,M}(\tau)$, $\varphi\in {\mathcal S}_{\pi,M}(\nu)$ and $\xi\in {\mathcal S}_{\pi_0,M}(\varrho)$, and since $\tau$, $\nu$ and $\varrho$ are in the same stratum,
$
p_\pi(\xi)=p_\pi(\varphi)=p_\pi(\psi)
$
holds, with the finite orthoprojection $p_\pi(\psi)
$. This completes the proof of the assertions relating $\varrho$ and $\nu$. Moreover, since $\varrho$ and $\nu$  can be arbitrarily chosen from $\Omega_M(\tau)$, this in view of the above is showing that each other state there is of finite type, too.
\end{proof}
\begin{corolla}\label{fingeoconv}
Let $M$ be a unital ${\mathsf C}^*$-algebra. Then the following facts hold true.
\begin{enumerate}
\item \label{fingeoconv0}
Each pure state over $M$ is of  finite type.
\item \label{fingeoconv2}
For finite dimensional $M$, each state is finite-type.
\item \label{fingeoconv1}
The stratum $\Omega_M(\tau)$ of each finite-type state $\tau$ is geodesically convex.
\end{enumerate}
\end{corolla}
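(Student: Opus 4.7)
The three claims can be established essentially by assembling the machinery already developed.

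For part \eqref{fingeoconv0}, my plan is to recall that if $\tau$ is pure, then its GNS representation $\{\pi_\tau,{\mathcal H}_{\pi_\tau}\}$ is irreducible, so $\pi_\tau(M)^{\,\prime}={\mathbb C}{\mathsf 1}$ and the enveloping $vN$-algebra is all of ${\mathsf B}({\mathcal H}_{\pi_\tau})$. For the cyclic vector $\psi_\tau\in {\mathcal S}_{\pi_\tau,M}(\tau)$ the support $p_{\pi_\tau}(\psi_\tau)$ projects onto $[\pi_\tau(M)^{\,\prime}\psi_\tau]={\mathbb C}\psi_\tau$, so $p_{\pi_\tau}(\psi_\tau)$ has rank one and is therefore finite in ${\mathsf B}({\mathcal H}_{\pi_\tau})$. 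Hence $\tau$ is of finite type.

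For part \eqref{fingeoconv2}, if $\dim M<\infty$, then for any state $\tau$ the GNS space ${\mathcal H}_{\pi_\tau}$ is finite-dimensional, and a fortiori $\pi_\tau(M)^{\,\prime\prime}\subset {\mathsf B}({\mathcal H}_{\pi_\tau})$ is a finite-dimensional $vN$-algebra. Every orthoprojection in a finite-dimensional $vN$-algebra is finite, so in particular $p_{\pi_\tau}(\psi_\tau)$ is finite, and $\tau$ is of finite type.

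For part \eqref{fingeoconv1}, the strategy is to combine Lemma \ref{ft}, Example \ref{suppfin}, Lemma \ref{beicconv1}, and Corollary \ref{beicconv}. Since each stratum is arc-determining by Example \ref{geodisj2}\,\eqref{geodisj2c}, in view of Definition \ref{geoconvex}\,\eqref{geoconvex3} it suffices to establish convexity of $\Omega_M(\tau)$ for finite-type $\tau$. Pick $\nu,\varrho\in \Omega_M(\tau)$ with $\nu\neq \varrho$ and let $\gamma\in {\mathcal{C}}^{\nu,\varrho}(M)$ be the (unique) geodesic arc. Applying Lemma \ref{ft} to the finite-type $\tau$, one obtains a unital $^*$-representation $\{\pi,{\mathcal H}_\pi\}$ together with $\varphi\in {\mathcal S}_{\pi,M}(\nu)$, $\xi\in {\mathcal S}_{\pi,M}(\varrho)$ satisfying $\langle\xi,\varphi\rangle_\pi=F(M|\varrho,\nu)$ and such that $p_\pi(\varphi)=p_\pi(\xi)$ is a finite orthoprojection of $N=\pi(M)^{\,\prime\prime}$.

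The key point is that in this representation the positive symmetric operator $A_0$ of \eqref{ope} is affiliated to the finite $vN$-algebra $p_\pi(\varphi)Np_\pi(\varphi)$; by Example \ref{suppfin} it is therefore essentially self-adjoint. Lemma \ref{beicconv1} then gives $|\gamma|\subset \Omega_M(\tau)$, which is Definition \ref{geoconvex}\,\eqref{geoconvex1}. Combined with the arc-determining property already noted, this yields geodesic convexity. (Alternatively, one may derive convexity via midpoint convexity and appeal to Corollary \ref{beicconv}: the midpoint state of $\gamma$ lies in $\Omega_M(\tau)$ because essential self-adjointness of $A_0$ via Lemma \ref{supp5} guarantees $p_\pi(\xi+\varphi)=p_\pi(\varphi)$.) The only delicate point in the plan is to recognize that Lemma \ref{ft} really delivers a representation to which Example \ref{suppfin} applies —that is, that the \emph{common} support $p_\pi(\varphi)=p_\pi(\xi)$ can be arranged to be finite in $N$— but this is exactly the content of Lemma \ref{ft}, so no further work is required.
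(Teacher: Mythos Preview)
Your argument is correct and matches the paper's proof essentially line for line. The only cosmetic difference is in part~\eqref{fingeoconv1}: the paper uses Example~\ref{suppfin} to conclude $p_\pi(\xi+\varphi)=p_\pi(\varphi)$, reads off midpoint convexity, and then invokes Corollary~\ref{beicconv}, whereas your primary route extracts essential self-adjointness of $A_0$ (via Lemma~\ref{supp5} and Example~\ref{suppfin}) and feeds it directly into Lemma~\ref{beicconv1}; since Corollary~\ref{beicconv} is itself proved through Lemma~\ref{beicconv1}, the two paths coincide, and you even note the paper's variant in your parenthetical remark.
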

\begin{proof}
By standard theory, ${\mathcal{S}}(M)$ admits extremal elements (pure states). Let $\tau\in {{\mathsf{ex}}}\,{\mathcal S}(M)$, that is $\tau$ be a pure state. Then, it is well-known that the  $\tau$-associated cyclic $^*$-representation $\pi_\tau$ of $M$ is irreducibly acting, that is, $\pi_\tau(M)^{\,\prime\prime}={\mathsf B}({\mathcal H}_{\pi_\tau})$ holds. Therefore, the $\tau$-associated vector state has a minimal orthoprojection as support, and thus \eqref{fingeoconv0} is seen. To see \eqref{fingeoconv2}, let  $M$ be dimensionally finite. Then, the $vN$-algebra $N=\pi(M)^{\,\prime\prime}$ generated by any $^*$-representation will be  dimensionally finite, too. Thus, any state  is of finite type. Finally, in order to see \eqref{fingeoconv1},
let $\tau$ be a finite-type state. To be non-trivial, suppose $\#\,\Omega_M(\tau)>1$. Consider $\nu,\varrho\in \Omega_M(\tau)$, with $\varrho\not=\nu$.  Let  a unital $^*$-representation $\{\pi,{\mathcal H}_\pi\}$ be chosen as in Lemma \ref{ft}. By assumption $p_\pi(\varphi)$ is a finite orthoprojection of  $N=\pi(M)^{\,\prime\prime}$. But then, following Example \ref{suppfin} we may conclude that  $p_\pi(\xi+\varphi)=p_\pi(\varphi)$ (refer to the notations used in Lemma \ref{ft}). Hence, the midpoint-state of the (unique) geodesic arc $\gamma$ connecting $\nu$ and $\varrho$ belongs to $\Omega_M(\tau)$, too. Since $\nu\not=\varrho$ could be chosen at will from $\Omega_M(\tau)$, the stratum is midpoint convex. By Corollary \ref{beicconv} convexity of $\Omega_M(\tau)$ follows. Since by Example \ref{geodisj2} strata are arc-determining sets, \eqref{fingeoconv1} is seen.
\end{proof}
Relating non-triviality of strata, it is useful to take notice of the following facts.
\begin{lemma}\label{multistrat0}
If $\nu\in  {\mathsf{ex}}\,{\mathcal{S}}(M)$, then $\#\,\Omega_M(\nu)=1$.
\end{lemma}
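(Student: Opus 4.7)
The plan is to pass to an irreducible representation in which any unit vector spans a one-dimensional invariant subspace under the commutant, so that two vectors with coinciding $p_\pi(\cdot)$-supports differ only by a phase and therefore induce the same vector state.

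Given an arbitrary $\varrho\in \Omega_M(\nu)$, I want to prove $\varrho=\nu$. First I would use Corollary \ref{fingeoconv}\,\eqref{fingeoconv0} to note that the pure state $\nu$ is of finite type, and then invoke Lemma \ref{ft} (with $\tau=\nu$ and the pair $\{\nu,\varrho\}\subset\Omega_M(\nu)$) to produce a unital $^*$-representation $\{\pi,{\mathcal H}_\pi\}$ together with cyclic vectors $\varphi\in {\mathcal S}_{\pi,M}(\nu)$ and $\xi\in {\mathcal S}_{\pi,M}(\varrho)$ satisfying $p_\pi(\varphi)=p_\pi(\xi)$. Since $\varphi$ is cyclic for $\pi(M)$ and implements the pure state $\nu$, the representation $\{\pi,{\mathcal H}_\pi\}$ is unitarily equivalent to the GNS representation $\{\pi_\nu,{\mathcal H}_{\pi_\nu}\}$; by purity of $\nu$ (as recalled in the proof of Corollary \ref{fingeoconv}\,\eqref{fingeoconv0}) this GNS representation is irreducible, so that $\pi(M)^{\,\prime\prime}={\mathsf B}({\mathcal H}_\pi)$ and $\pi(M)^{\,\prime}={\mathbb C}\cdot{\mathsf 1}$. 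In particular, for any unit vector $\eta\in {\mathcal H}_\pi$ one has $[\pi(M)^{\,\prime}\eta]={\mathbb C}\eta$, so $p_\pi(\eta)$ is the rank-one orthoprojection onto ${\mathbb C}\eta$.

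Applied to $\varphi$ and $\xi$ in conjunction with $p_\pi(\varphi)=p_\pi(\xi)$, this will force ${\mathbb C}\varphi={\mathbb C}\xi$; since both are unit vectors, $\xi=c\,\varphi$ for some $c\in{\mathbb C}$ with $|c|=1$, and hence
\[
\varrho(x)=\langle \pi(x)\xi,\xi\rangle_\pi=|c|^2\,\langle \pi(x)\varphi,\varphi\rangle_\pi=\nu(x),\qquad x\in M.
\]
Thus $\varrho=\nu$, and by the arbitrariness of $\varrho\in\Omega_M(\nu)$ together with the trivial inclusion $\nu\in\Omega_M(\nu)$, the conclusion $\Omega_M(\nu)=\{\nu\}$ will follow. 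The only substantive input is the applicability of Lemma \ref{ft}, which rests on the finite-type property of the pure state $\nu$ and the classical identification of purity with irreducibility of the GNS representation; once these are secured, the remainder is entirely elementary, so I do not anticipate any genuine obstacle beyond correctly setting up the representation.
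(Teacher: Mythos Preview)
Your proof is correct and follows essentially the same idea as the paper: pass to an irreducible representation in which $p_\pi(\cdot)$ collapses to a rank-one projection, then use $p_\pi(\varphi)=p_\pi(\xi)$ to conclude $\xi\in{\mathbb C}\varphi$. The only difference is packaging: you invoke Lemma~\ref{ft} (together with Corollary~\ref{fingeoconv}\,\eqref{fingeoconv0}) to obtain directly a representation in which both $\varphi$ and $\xi$ are cyclic, whereas the paper works from Theorem~\ref{suppabsstet} to get $p_\pi(\varphi)=p_\pi(\xi)$ and $p_\pi^{\,\prime}(\varphi)=p_\pi^{\,\prime}(\xi)$, then restricts by hand to the cyclic subspace $[\pi(M)\varphi]$ and argues via minimal supports there. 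Your route is slightly more economical since the restriction step is already absorbed into Lemma~\ref{ft}; the paper's route is more self-contained in that it does not rely on the finite-type machinery.
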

\begin{proof}
Suppose $\varrho\in \Omega_M(\nu)$, see \eqref{stratum0}, and let $\{\pi,{\mathcal{H}}_\pi\}$, $\varphi\in {\mathcal{S}}_{\pi,M}(\nu),\,\xi \in {\mathcal{S}}_{\pi,M}(\varrho)$ be chosen as in Theorem \ref{suppabsstet}. Since both $\varrho\dashv \nu$ and  $\nu\dashv \varrho$ are fulfilled, in respect of $N=\pi(M)^{\,\prime\prime}$ and $N^{\,\prime}$, respectively, from Theorem \ref{suppabsstet}\,\eqref{supp2}, \eqref{supp3} we infer that
\begin{equation}\label{multistrat0a}
p_\pi(\varphi)=p_\pi(\xi),\ p^{\,\prime}_\pi(\varphi)=p^{\,\prime}_\pi(\xi)
\end{equation}
are fulfilled. Let ${\mathcal{H}}_1=p^{\,\prime}_\pi(\varphi){\mathcal{H}}_\pi=[\pi(M)\varphi]$, and be $\pi_1$ the unital $^*$-representation obtained by restricting the action of $\pi$ onto ${\mathcal{H}}_1$. Thus $\varphi\in {\mathcal{H}}_1$ is cyclic under the action of $\{\pi_1,{\mathcal{H}}_1\}$ and is implementing $\nu$. But since $\nu$ is pure, $\nu\in  {\mathsf{ex}}\,{\mathcal{S}}(M)$, we must have that $\pi_1$ is irreducibly acting over ${\mathcal{H}}_1$,  $\pi_1(M)^{\,\prime\prime}={\mathsf{B}}({\mathcal{H}}_1)$, and the minimal orthoprojection $p_\pi(\varphi)p^{\,\prime}_\pi(\varphi)$ in ${\mathsf{B}}({\mathcal{H}}_1)$ is the support of the vector state $\nu_{\pi_1}$. Also, according to the second of the relations in \eqref{multistrat0a}, we have $\xi\in {\mathcal{H}}_1$, and $\xi$ is a cyclic implementing vector for $\varrho$ in respect of $\pi_1$. Accordingly, also $\varrho$ has to be a pure state. Note that the vector state $\varrho_{\pi_1}$ then has minimal in ${\mathsf{B}}({\mathcal{H}}_1)$ orthoprojection $p_\pi(\xi)p^{\,\prime}_\pi(\xi)$ as support. Since owing to \eqref{multistrat0a} we have $p_\pi(\xi)p^{\,\prime}_\pi(\xi)=p_\pi(\varphi)p^{\,\prime}_\pi(\varphi)$, by both $\varphi$ and $\xi$ the same vector state is  implemented over ${\mathsf{B}}({\mathcal{H}}_1)$, that is, $\varrho_{\pi_1}=\nu_{\pi_1}$ has to hold. From this $\varrho=\nu$ follows.
\end{proof}
\begin{corolla}\label{multistrat}
$\#\,\Omega>1$ holds if, and only if, $\Omega=\Omega_M(\nu)$, with $\nu\not\in {\mathsf{ex}}\,{\mathcal{S}}(M)$.
\end{corolla}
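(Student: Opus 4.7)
The plan is to handle the two directions separately, with the bulk of the work on the ``if'' direction being an elementary construction.

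The ``only if'' direction will follow immediately from Lemma \ref{multistrat0} by contraposition. Since $\Omega$ is a stratum, the definitions in \eqref{stratum} give $\Omega=\Omega_M(\mu)$ for every $\mu\in\Omega$. If all $\mu\in\Omega$ were extremal, then pick any such $\mu$; by Lemma \ref{multistrat0}, $\#\,\Omega_M(\mu)=1$, i.e., $\Omega=\{\mu\}$, contradicting the assumption $\#\,\Omega>1$. Hence some $\nu\in\Omega$ is non-pure, and $\Omega=\Omega_M(\nu)$ with $\nu\not\in{\mathsf{ex}}\,{\mathcal{S}}(M)$.

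For the ``if'' direction I would construct an explicit second element of $\Omega_M(\nu)\setminus\{\nu\}$ by affine mixing. Since $\nu\not\in{\mathsf{ex}}\,{\mathcal{S}}(M)$, there exist $\mu_1,\mu_2\in{\mathcal{S}}(M)$ with $\mu_1\neq\mu_2$ and $t\in\,]0,1[$ such that $\nu=t\mu_1+(1-t)\mu_2$. Observe that if $\mu_1=\nu$, then the equation forces $\mu_2=\nu$ as well, contradicting $\mu_1\neq\mu_2$; so $\mu_1\neq\nu$, and moreover $t\mu_1\leq\nu$ gives $\mu_1\leq t^{-1}\nu$. I would then set
\begin{equation*}
  \varrho=\tfrac{1}{2}\,\nu+\tfrac{1}{2}\,\mu_1\in{\mathcal{S}}(M).
\end{equation*}
Clearly $\varrho\neq\nu$ since $\mu_1\neq\nu$.

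It remains to verify that $\varrho\in\Omega_M(\nu)$, i.e., that $\varrho\dashv\vdash\nu$. From $\mu_1\leq t^{-1}\nu$ one obtains $\varrho\leq\tfrac{1}{2}(1+t^{-1})\nu$, so $\varrho\ll\nu$; conversely $\nu\leq\nu+\mu_1=2\varrho$, so $\nu\ll\varrho$. By Remark \ref{saka}\,\eqref{saka1}, each domination $\ll$ implies the corresponding absolute continuity $\dashv$, yielding $\varrho\dashv\vdash\nu$ and hence $\varrho\in\Omega_M(\nu)\setminus\{\nu\}$, so that $\#\,\Omega_M(\nu)>1$.

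There is no real obstacle: the proof rests entirely on the already-established Lemma \ref{multistrat0} and the elementary fact that a non-extremal state admits a distinct dominated state. The only minor point requiring care is confirming that the decomposition provided by non-extremality supplies a summand $\mu_1$ strictly different from $\nu$, which is transparent from the convex algebra above.
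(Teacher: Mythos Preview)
Your proof is correct. The ``only if'' direction matches the paper's use of Lemma \ref{multistrat0}. For the ``if'' direction, however, you take a more direct route than the paper. The paper decomposes $\nu=(\varrho+\mu)/2$ with $\varrho\neq\nu$, observes only the one-sided conclusion $\varrho\dashv\nu$ (so $\varrho\in\Omega^0_M(\nu)$), and then invokes the density result Lemma \ref{setprop}\,\eqref{setprop3} to pass from $\#\,\Omega^0_M(\nu)>1$ to $\#\,\Omega_M(\nu)>1$. Your construction of $\varrho=\tfrac{1}{2}\nu+\tfrac{1}{2}\mu_1$ instead yields two-sided domination $\varrho\ll\nu$ and $\nu\ll\varrho$ directly, placing $\varrho$ in the stratum $\Omega_M(\nu)$ itself without any appeal to density. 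This is a cleaner and more elementary argument, since Lemma \ref{setprop}\,\eqref{setprop3} requires a nontrivial approximation via implementing vectors in a $^*$-representation.
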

\begin{proof}
Note that according to Lemma \ref{multistrat0}, in supposing $\#\,\Omega > 1$ for a stratum $\Omega=\Omega_M(\nu)$ we have to conclude that $\nu\not\in {\mathsf{ex}}\,{\mathcal{S}}(M)$.
To see the other way around, suppose  $\Omega=\Omega_M(\nu)$, for $\nu\not\in {\mathsf{ex}}\,{\mathcal{S}}(M)$. Then, $\nu=(\varrho+\mu)/2$, for states $\varrho, \mu\in \mathsf{S}(M)$ with $\varrho\not=\nu$.
Hence $\varrho\leq 2\nu$, from which $\varrho\dashv \nu$ follows, see Remark \ref{saka}\,\eqref{saka1}. In view of \eqref{vorstratum} this is the same as $\varrho\in \Omega^0_M(\nu)$. Thus $\#\,\Omega^0_M(\nu)>1$ holds. But by  Lemma \ref{setprop}\,\eqref{setprop3}  $\Omega_M(\nu)$ is dense in  $\Omega^0_M(\nu)$. Thus $\#\,\Omega >1$ follows.
\end{proof}

\section{Tangent forms}\label{TF}
The aim of this section is to identify the set of those linear forms over a unital ${\mathsf C}^*$-algebra $M$ that, at a given fixed state $\nu$, can arise as tangent forms along parameterized curves which are passing through this state. However, as usually  only such curves $\gamma$ will be taken into account which were admitting $\gamma$-compliant unital $^*$-representations of $M$ around $\nu$. As will be shown, when making reference to Definition \ref{e.1} and formula \eqref{tangentnorm}, the subsequently defined set will meet the requirements, essentially.
\begin{definition}\label{trdef}
Let $M$ be a unital ${\mathsf C}^*$-algebra. For $\nu\in {\mathcal S}(M)$ consider the set
\begin{equation}\label{tspace}
{\mathsf{T}}_\nu(M)=\bigl\{f\in M^*_h:\,f({\mathsf 1})=0,\,\|f\|_\nu<\infty\bigr\}
\end{equation}
and which henceforth will be referred to  as `tangent space at $\nu$'.
\end{definition}
To prepare for the proof some preliminary considerations will be useful.
\subsection{Basic facts about tangent forms}\label{tforms} Let $I\ni t\,\mapsto\,\nu_t\in
{\mathcal S}(M)$ be a parameterized curve $\gamma$ passing through a state $\nu$ at parameter value
$t=0$ and thereby admitting a $\gamma$-compliant unital *-representation $\{\pi,{\mathcal H}_\pi\}$ around $\nu$. Let $I_\pi\ni t\mapsto \varphi_t\in {\mathcal S}_{\pi,M}(\nu_t)$ be a corresponding differentiable at $t=0$ implementation of $\gamma$, with tangent vector at $\nu$
$$\psi=\psi(\pi,{\mathcal H}_\pi,(\varphi_t))\in {\mathsf{T}}_\nu(M|\gamma)$$
Following Remark \ref{iimpdi}\,\eqref{iimpdi1}
the tangent form $f=\nu^{\,\prime}_t|_{t=0}$ of $\gamma$ at $\nu$ exists and
reads as
\begin{equation}\label{tform1}
    f(x)=\re f_{\xi,\varphi}(\pi(x))\,,\text{ with }\xi=2\psi\,,
\end{equation}
for all $x\in M$. For two vectors $\xi,\chi\in {\mathcal H}_\pi$, $f_{\xi,\chi}$ is the
normal (i.e.~ultrastrongly continuous) linear form that is defined on the $vN$-algebra $N=\pi(M)^{\,\prime\prime}$ as
$$f_{\xi,\chi}(x)=
\langle x\xi,\chi\rangle_\pi$$ for all $x\in N$. Moreover, whenever $g$ is a bounded linear form,
by $\re g$ the hermitian linear form of the real part of $g$ will be meant: $\re g=(g+g^*)/2$.
 Thus especially, $$\re f_{\xi,\varphi}(x)=\frac{1}{2}\bigl(\langle x\xi,\varphi\rangle_\pi+ \langle x\varphi,\xi\rangle_\pi\bigr)$$ for any $x\in N$.
 From this and \eqref{iimpdi1a} the validity of \eqref{tform1} then is evident.
Note that for each hermitian $x\in N_{\mathsf h}$ one has
$$\re f_{\xi,\varphi}(x)=\frac{1}{2}\bigl(\langle \xi,x\varphi\rangle_\pi+ \langle x\varphi,\xi\rangle_\pi\bigr)$$
 Thus, since for $x=x^*$ one has $\langle x\varphi,\xi\rangle_\pi=\overline{\langle \xi,x\varphi\rangle}_\pi$ (complex conjugate),  and since according to the Cauchy-Schwarz inequality the estimate
 $$|\langle \xi,x\varphi\rangle_\pi|\leq \|\xi\|_\pi\,\sqrt{\langle x^2\varphi,\varphi\rangle_\pi}$$ is fulfilled,
 one has the following fundamental estimate:
\begin{subequations}\label{CSI0}
\begin{equation}\label{CSI}
  |\re f_{\xi,\varphi}(x)|\leq \|\xi\|_\pi\,\sqrt{\langle x^2\varphi,\varphi\rangle_\pi}\,,\text{ for each }x=x^*,\, x\in N
\end{equation}
On the other hand, in case of positive $x$, $x\in N_+$, application
of the Cauchy-Schwarz inequality is showing that $$|\langle
\xi,x\varphi\rangle_\pi|=|\langle
\sqrt{x}\xi,\sqrt{x}\varphi\rangle_\pi|\leq
\|\sqrt{x}\,\xi\|_\pi\,\|\sqrt{x}\,\varphi\|_\pi$$ Hence, \eqref{CSI} can be
supplemented by
\begin{equation}\label{CSI1}
  |\re f_{\xi,\varphi}(x)|\leq \|\sqrt{x}\,\xi\|_\pi\,\sqrt{\langle x\,\varphi,\varphi\rangle_\pi}\,,\text{ for each }x\in
  N_+\ \
\end{equation}
\end{subequations}
As subsequently will be shown, in
the sense of Remark \ref{rem1}\,\eqref{rem11},  the structure of the candidates
\eqref{tform1} for tangent forms at $\nu$ over the ${\mathsf C}^*$-algebra $M$ completely can be explained by  the structure of the hermitian linear forms of the type $\re f_{\xi,\varphi}$
with fixed $\varphi\in {\mathcal S}_{N}(\nu_\pi)={\mathcal S}_{\pi,M}(\nu)$ over the $vN$-algebra $N=\pi(M)^{\,\prime\prime}$. Note that $\nu_\pi$ is a (uniquely determined by $\nu$)  vector state over $N$. Hence, if seen in more  abstract settings, the problem is as follows: provided a vector state $\nu$ over a general $vN$-algebra $N$ acting on a Hilbert space ${\mathcal H}$ with scalar product $\langle\cdot,\cdot\rangle$ is given, then for fixed $\varphi\in {\mathcal S}_{N}(\nu)$ the structure of the set of all forms  $\re f_{\xi,\varphi}$ with $\xi\in {\mathcal H}$ will be the object of investigation.
Clearly, when omitting the subscript $\pi$ within  each of the above mentioned expressions referring to $N=\pi(M)^{\,\prime\prime}$ and ${\mathcal H}_\pi$ exclusively, see  e.g.\,\eqref{CSI0}, a true expression in this more abstract context will be obtained. Keeping in mind the latter, the analysis of the problem will be started now.
\subsubsection{Hermitian forms
associated with a vector state on a $vN$-algebra}\label{euk} Let $N$ be a $vN$-algebra acting
on a Hilbertspace ${\mathcal H}$, with scalar product $\langle\cdot,\cdot\rangle$. Let $\nu$ be a vector state on
$N$, that is, a state  with non-trivial fibre ${\mathcal
S}_N(\nu)={\mathcal S}_{{\mathop{id}},N}(\nu)$.  Refer to a
hermitian linear form $g$ on $N$ as `associated
with $\nu$' if $g=\re f_{\psi,\varphi}$, for $\psi\in {\mathcal H}$
and  $\varphi\in{\mathcal S}_N(\nu)$. Assume now also $\chi\in
{\mathcal S}_N(\nu)$. By \eqref{bas4}, since there exists a partial
isometry $v\in N^{\,\prime}$ with $v^*v\geq p^{\,\prime}(\chi)$ and
$\varphi=v\chi$, one then has that $g=\re f_{\psi,\chi}=\re
f_{\psi,v\varphi}=\re f_{v^*\psi,\varphi}$. Thus, in order to
consider the real linear space ${\mathfrak H}_\nu(N)$ of all hermitian linear
forms over $N$ which are associated with $\nu$, we can fix $\varphi\in
{\mathcal S}_N(\nu)$ and then find that
\begin{subequations}\label{euk0}
\begin{equation}\label{euk.1}
{\mathfrak H}_\nu(N)=\bigl\{ g=\re f_{\psi,\varphi}:\,\forall\,\psi\in
{\mathcal H}\bigr\}\,.
\end{equation}
On this space a norm
$|\cdot|_\nu$ can be introduced as follows:
\begin{equation}\label{euk.2}
{\mathfrak H}_\nu(N)\ni g\longmapsto|g|_\nu=\inf\bigl\{\|\psi\| :\, \psi\in {\mathcal H}\mbox{ with }g=\re
f_{\psi,\varphi}\bigr\}\,.
\end{equation}
\end{subequations}
Since absolute homogeneity is obvious, to see that $|\cdot|_\nu$ is
a norm, one is required to show strict positivity and subadditivity.
Suppose $|g|_\nu=0$. Then, in accordance with \eqref{euk.2} there
exist a null-sequence $\{\psi_n\}\subset {\mathcal H}$ with $g=\re
f_{\psi_n,\varphi}$, for each $n\in {{\mathbb{N}}}$. Hence, for each
$x\in N_{\mathrm{h}}$, according to \eqref{CSI}, $$|g(x)|\leq
\|\psi_n\|\,\sqrt{\nu(x^2)}$$ is fulfilled. Upon going to the
limit $n\to\infty$ the relation $|g(x)|=0$ is inferred. Hence, since
the hermitian elements of $N$ separate the linear forms, $g=0$
follows, and strict positivity is seen.  To see subadditivity, let
$g,f\in {\mathfrak H}_\nu(N)$, and be $\varepsilon >0$. In line with
\eqref{euk.2} let $\psi,\tilde{\psi}\in {\mathcal H}$ and obeying
$g=\re f_{\psi,\varphi}$, $f=\re f_{\tilde{\psi},\varphi}$ and
$\|\psi\|\leq |g|_\nu+\varepsilon$, $\|\tilde{\psi}\|\leq
|f|_\nu+\varepsilon$, respectively. Then $g+f=\re
f_{\psi+\tilde{\psi},\varphi}$ follows. Hence, by subadditivity of
$\|\cdot\|$ and \eqref{euk.2},
$$|g|_\nu+|f|_\nu+2\varepsilon\geq \|\psi\|+ \|\tilde{\psi}\|\geq
\|\psi+\tilde{\psi}\|\geq |g+f|_\nu$$
follows.
Since $\varepsilon >0$ can be
chosen at will, subadditivity is seen. Thus,
\eqref{euk.2} defines a norm on the real linear space
\eqref{euk.1}.
\begin{lemma}\label{euk1}
Let us consider the real linear
space ${\mathfrak H}_\nu(N)$ as a normed space under
the norm $|\cdot|_\nu$. Then, the linear mapping
\begin{equation}\label{euk.3}
\iota: [N_{\mathrm{h}}\varphi]\ni \psi\,\longmapsto\,\re
f_{\psi,\varphi}\in  {\mathfrak H}_\nu(N)
\end{equation}
gets an isometric isomorphism from the Euclidean subspace
$[N_{\mathrm{h}}\varphi]\subset {\mathcal H}_{{\mathbb{R}}}$ onto
the normed real linear space ${\mathfrak H}_\nu(N)$.
\end{lemma}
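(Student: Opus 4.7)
The plan is to exploit the real-orthogonal decomposition of ${\mathcal H}$ supplied by Lemma~\ref{ocomp}, namely
\[
{\mathcal H}_{\mathbb R}=[N_{\mathsf h}\varphi]\oplus \mathrm{i}[p^{\,\prime}(\varphi)N^{\,\prime}_{\mathsf h}\varphi]\oplus p^{\,\prime}(\varphi)^\perp{\mathcal H},
\]
combined with the observation that only the first summand contributes to $\re f_{\psi,\varphi}$. From this one reads off surjectivity and isometry of $\iota$, while injectivity then falls out of a short approximation argument that is essentially the only non-routine step.

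For surjectivity, I would take an arbitrary $g=\re f_{\psi,\varphi}\in{\mathfrak H}_\nu(N)$ and split $\psi=\hat\psi_0+\hat\psi_1+\hat\psi_2$ according to the decomposition above. Since $p^{\,\prime}(\varphi)\in N^{\,\prime}$ and $p^{\,\prime}(\varphi)\hat\psi_2=0$, both $\langle x\hat\psi_2,\varphi\rangle$ and $\langle x\varphi,\hat\psi_2\rangle$ vanish for every $x\in N$, so $\re f_{\hat\psi_2,\varphi}=0$. For $\hat\psi_1=\mathrm{i}z\varphi$ with $z=z^*\in N^{\,\prime}$ the two summands $\langle x\hat\psi_1,\varphi\rangle=\mathrm{i}\langle xz\varphi,\varphi\rangle$ and $\langle x\varphi,\hat\psi_1\rangle=-\mathrm{i}\langle xz\varphi,\varphi\rangle$ cancel, giving $\re f_{\hat\psi_1,\varphi}=0$. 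Hence $g=\re f_{\hat\psi_0,\varphi}=\iota(\hat\psi_0)$, so $\iota$ maps onto ${\mathfrak H}_\nu(N)$.

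To get the isometry $|\iota(\hat\psi_0)|_\nu=\|\hat\psi_0\|$ for $\hat\psi_0\in[N_{\mathsf h}\varphi]$, the estimate $\le$ is immediate from \eqref{euk.2}. For the reverse I would take any $\psi\in{\mathcal H}$ with $\re f_{\psi,\varphi}=\re f_{\hat\psi_0,\varphi}$, decompose $\psi=\hat\psi_0'+\hat\psi_1'+\hat\psi_2'$ as above, observe by the vanishing computations that $\re f_{\hat\psi_0',\varphi}=\re f_{\hat\psi_0,\varphi}$, and invoke the injectivity of $\iota|_{[N_{\mathsf h}\varphi]}$ to conclude $\hat\psi_0'=\hat\psi_0$. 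Real-orthogonality of the three summands then yields the Pythagorean identity $\|\psi\|^2=\|\hat\psi_0\|^2+\|\hat\psi_1'\|^2+\|\hat\psi_2'\|^2\ge\|\hat\psi_0\|^2$, and taking the infimum over $\psi$ gives $|\iota(\hat\psi_0)|_\nu\ge\|\hat\psi_0\|$.

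The main technical step is the injectivity assertion just used, and it is the only place where more than bookkeeping is required. Suppose $\chi\in[N_{\mathsf h}\varphi]$ satisfies $\re f_{\chi,\varphi}=0$, and pick $\{y_n\}\subset N_{\mathsf h}$ with $y_n\varphi\to\chi$. For each self-adjoint $y_n$ one has $\langle y_n\chi,\varphi\rangle=\overline{\langle y_n\varphi,\chi\rangle}$, hence
\[
\re f_{\chi,\varphi}(y_n)=\re\langle y_n\varphi,\chi\rangle.
\]
The left side vanishes by hypothesis, while continuity of the inner product forces the right side to tend to $\re\langle\chi,\chi\rangle=\|\chi\|^2$. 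Thus $\chi=0$, and assembling the three steps yields the lemma.
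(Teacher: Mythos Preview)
Your argument is correct, but it takes a different route from the paper's. The paper never invokes Lemma~\ref{ocomp}; instead it first derives from \eqref{CSI} and the definition of $|\cdot|_\nu$ the bound $|g(x)|\le |g|_\nu\|x\varphi\|$ for $x\in N_{\mathsf h}$, reads this as continuity of the real-linear functional $N_{\mathsf h}\varphi\ni x\varphi\mapsto g(x)$, and then applies the Riesz representation theorem on the real Hilbert space $[N_{\mathsf h}\varphi]$ to produce the unique $\psi_0$ with $g=\re f_{\psi_0,\varphi}$. Isometry is then obtained by approximating $\psi_0$ with $x_n\varphi$ and evaluating $g(x_n)$ on both sides---essentially the same computation you use for injectivity.

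Your approach trades the Riesz step for the geometric decomposition of ${\mathcal H}_{\mathbb R}$ already established in Lemma~\ref{ocomp}, showing directly that the components in $\mathrm{i}[p^{\,\prime}(\varphi)N^{\,\prime}_{\mathsf h}\varphi]$ and $p^{\,\prime}(\varphi)^\perp{\mathcal H}$ do not contribute to $\re f_{\psi,\varphi}$. This is more explicit about \emph{which} vector realizes the infimum in \eqref{euk.2} (namely the $[N_{\mathsf h}\varphi]$-component of any representing $\psi$) and ties the lemma visibly to the decomposition \eqref{decompo} used elsewhere; the paper's argument, by contrast, is shorter and self-contained, needing only \eqref{CSI} rather than the structural Lemma~\ref{ocomp} (whose proof in turn calls on Appendix~\ref{app_c}). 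One small point: your cancellation for $\hat\psi_1=\mathrm{i}z\varphi$ is written only for $z\in N^{\,\prime}_{\mathsf h}$, not for general $\hat\psi_1$ in the closure $\mathrm{i}[p^{\,\prime}(\varphi)N^{\,\prime}_{\mathsf h}\varphi]$; the passage to the closure is of course immediate by continuity of $\psi\mapsto\re f_{\psi,\varphi}$, but you should say so.
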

\begin{proof}
Let $g=\re f_{\psi,\varphi}$, with arbitrary $\psi\in {\mathcal H}$,
and be $x\in N_{\mathrm{h}}$. Then, from the definition \eqref{euk.2} of
$|g|_\nu$ together with the estimate \eqref{CSI} the estimate
\begin{subequations}\label{fund}
\begin{equation}\label{fund1}
 |g(x)|\leq |g|_\nu\|x\varphi\|
\end{equation}
is seen to hold. Thus $N_{\mathrm{h}}\varphi\ni x\varphi
\,\longmapsto\,g(x)\in {{\mathbb{R}}}$ is continuous on the real linear
subspace $N_{\mathrm{h}}\varphi\subset {\mathcal H}_{{\mathbb{R}}}$. Hence, there exists
unique $\psi_0\in [N_{\mathrm{h}}\varphi]$ and obeying
\begin{equation}\label{fund2}
g(x)=\langle \psi_0,x\varphi\rangle_{{\mathbb{R}}}=
\re \langle x\psi_0,\varphi\rangle,
\end{equation}
\end{subequations}
for each $x\in N_{\mathrm{h}}$. Since the
hermitian elements separate the linear forms, $$g=\re f_{\psi_0,\varphi}$$ follows,
with a unique $\psi_0\in [N_{\mathrm{h}}\varphi]$. Hence, the map \eqref{euk.3} is surjective.

Suppose now that $g\not=0$. Then, $\psi_0\not=0$.
Let $\psi_0=
\lim_{n\to\infty} x_n\varphi$, with a sequence $\{x_n\}\subset N_{\mathrm{h}}$.
Then, upon taking the limit within the estimates arising from  \eqref{fund2} for $x=x_n$,  $$\lim_{n\to\infty}|g(x_n)|=\|\psi_0\|^2$$
can be inferred.
On the other hand, from the estimates  arising from  \eqref{fund1} for $x=x_n$, $$\lim_{n\to\infty}|g(x_n)|\leq
|g|_\nu \lim_{n\to\infty} \|x_n\varphi\|= |g|_\nu\|\psi_0\|$$ is inferred. Taking together both estimates
shows that $\|\psi_0\|\leq |g|_\nu$ has to be fulfilled. By construction of $\psi_0$ and in view of
\eqref{euk.2} it is clear now that equality $\|\psi_0\|=|g|_\nu$ has to occur. Thus, the map $\iota$ defined in \eqref{euk.3} is an isometry.
\end{proof}
Remind that from the canonical decomposition of a hermitian linear form $g$
especially follows that $\|g\|_1=\sup\{|g(x)|: x\in (N_{\mathrm{h}})_1\}$
holds. Having in mind this, from \eqref{fund1} the following can be
inferred to hold:
\begin{equation}\label{euk.4}
\forall\,g\in {\mathfrak H}_\nu(N)\,:\ \|g\|_1\leq |g|_\nu \sqrt{\|\nu\|_1}\,.
\end{equation}
\begin{remark}\label{scalar}
\begin{enumerate}
\item\label{scalar1}
According to Lemma \ref{euk1} the norm $|\cdot|_\nu$ is quadratic
and ${\mathfrak H}_\nu(N)$ is complete under this norm. For the
corresponding inner product on ${\mathfrak H}_\nu(N)$ the notation
$(\cdot,\cdot)_\nu$ will be used. For $x\in N_{\mathrm{h}}$, let a
linear form $\nu_x$ on $N$ be defined by $\nu_x=\re
f_{x\varphi,\varphi}=\re \nu((\cdot)x)$. The real linear space of
all these elements will be denoted by ${\mathfrak K}_\nu(N)$,
${\mathfrak K}_\nu(N)=\{\nu_x: x\in N_{\mathrm{h}}\}$.
 \item\label{scalar2}  Note that ${\mathfrak K}_\nu(N)\subset {\mathfrak H}_\nu(N)$
and according to Lemma \ref{euk1} the Euclidean space ${\mathfrak
H}_\nu(N)$ equivalently can be considered as the completition of
the real linear subspace ${\mathfrak K}_\nu(N)$ under the inner
product $\bigl(\nu_x,\nu_y\bigr)_\nu=\Re\,\nu(xy)$.
 \item\label{scalar3} Note that \eqref{euk.4} says that the  $|\cdot|_\nu$-topology  is at least as strong as the functional norm topology on ${\mathfrak H}_\nu(N)$. As a result of
this, ${\mathfrak H}_\nu(N)$  in the infinite dimensional case in
general
will not be $\|\cdot\|_1$-complete.
\end{enumerate}
\end{remark}
\begin{lemma}\label{normformel}
$\ \forall\,g\in {\mathfrak H}_\nu(N)\,:\ \displaystyle\frac{1}{2}\,|g|_\nu=\|g\|_\nu\,.$
\end{lemma}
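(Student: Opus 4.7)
My approach is to reformulate both sides of the identity in terms of the unique vector $\psi_0\in[N_{\mathrm h}\varphi]$ associated with $g$ by Lemma \ref{euk1}: since $|g|_\nu=\|\psi_0\|$, the claim becomes $\|g\|_\nu=\tfrac12\|\psi_0\|$. For the upper bound $\|g\|_\nu\le\tfrac12\|\psi_0\|$, I would fix any finite positive decomposition $\{x_j\}$ of the unity in $N$ and apply the Cauchy--Schwarz-type estimate \eqref{CSI1} with $\xi=\psi_0$ and $x=x_j\in N_+$ to obtain $g(x_j)^2\le\langle x_j\psi_0,\psi_0\rangle\,\nu(x_j)$; the primed summation is legitimate because $\nu(x_j)=0$ forces $\sqrt{x_j}\,\varphi=0$ and hence $g(x_j)=0$. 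Summing yields ${\sum_{j}}^\prime g(x_j)^2/\nu(x_j)\le\sum_j\langle x_j\psi_0,\psi_0\rangle=\|\psi_0\|^2$, and taking square roots and the supremum over $\{x\}$ gives the upper bound.

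For the lower bound $\|g\|_\nu\ge\tfrac12\|\psi_0\|$, I would first dispose of the dense special case $\psi_0=a\varphi$ with $a\in N_{\mathrm h}$. Given such an $a$, choose a partition $t_0<\cdots<t_n$ of its spectrum with mesh $\delta$, and let $\Delta_k=\chi_{[t_{k-1},t_k)}(a)\in N$ denote the corresponding spectral projections. These are mutually orthogonal positive elements of $N$ summing to $\mathsf 1$ and thus form an admissible decomposition. Since $\Delta_k$ commutes with $a$ one has $g(\Delta_k)=\re\nu(\Delta_k a)=\nu(\Delta_k a)$, and passing to the scalar spectral measure $d\mu(t)=d\langle E(t)\varphi,\varphi\rangle$ of $a$ with respect to $\varphi$, a cell-wise application of Jensen's inequality yields
\[
0\le\nu(a^2)-{\sum_{k}}^\prime\frac{g(\Delta_k)^2}{\nu(\Delta_k)}\le\tfrac{\delta^{2}}{4}\,\nu(\mathsf 1)=\tfrac{\delta^{2}}{4},
\]
so as $\delta\downarrow 0$ the sum tends to $\nu(a^2)=\|a\varphi\|^2$, whence $\|g\|_\nu\ge\tfrac12\|a\varphi\|$.

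To pass to an arbitrary $\psi_0\in[N_{\mathrm h}\varphi]$, I would pick a sequence $\{a_m\}\subset N_{\mathrm h}$ with $a_m\varphi\to\psi_0$, set $g_m=\re f_{a_m\varphi,\varphi}$, and apply the special case to obtain $\|g_m\|_\nu\ge\tfrac12\|a_m\varphi\|\to\tfrac12\|\psi_0\|$. The seminorm $\|\cdot\|_\nu$ is subadditive (by the $\ell^{2}$-triangle inequality applied inside its defining supremum), and combined with the upper bound already proved this yields $\bigl|\,\|g_m\|_\nu-\|g\|_\nu\bigr|\le\|g_m-g\|_\nu\le\tfrac12|g_m-g|_\nu\to 0$ by Lemma \ref{euk1}, delivering the desired inequality in the limit. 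The principal obstacle I foresee lies in the spectral bookkeeping of the special case: one must translate $\nu(\Delta_k a)$ and $\nu(\Delta_k)$ into scalar integrals against $\mu$, control the cell-wise variance of $t$ on each subinterval $[t_{k-1},t_k)$ uniformly in the partition, and verify that the primed summation excludes exactly those cells with $\mu([t_{k-1},t_k))=0$, where the numerator $g(\Delta_k)=\int_{[t_{k-1},t_k)}t\,d\mu$ vanishes as well; once these error terms are bounded by $\tfrac{\delta^{2}}{4}$ as displayed, both inequalities combine to give the claimed equality.
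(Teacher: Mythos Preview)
Your proof is correct and follows essentially the same three-step architecture as the paper: the upper bound via \eqref{CSI1}, the lower bound on a dense subclass of ${\mathfrak H}_\nu(N)$ via spectral projections, and the extension to all of ${\mathfrak H}_\nu(N)$ by subadditivity combined with the already-established upper bound.

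The only notable difference lies in the choice of dense subclass in the second step. The paper restricts to $\nu_x$ with $x\in N_{\mathrm h}$ having \emph{finite spectrum}; the spectral projections $\{p_j\}$ then give $\nu_x(p_j)=\lambda_j\nu(p_j)$, so the primed sum equals $\sum_j'\lambda_j^2\nu(p_j)=\nu(x^2)$ \emph{exactly}, and no error estimate is needed. You instead treat general $a\in N_{\mathrm h}$, which forces you to partition the spectrum with mesh $\delta$ and control the cell-wise variance by $\delta^2/4$ before letting $\delta\downarrow 0$. Both choices are dense in $[N_{\mathrm h}\varphi]$ under the $|\cdot|_\nu$-norm, so either works; the paper's version simply trades your Jensen/variance estimate for a one-line exact computation. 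Your subsequent limiting argument (via $\bigl|\,\|g_m\|_\nu-\|g\|_\nu\bigr|\le\|g_m-g\|_\nu\le\tfrac12|g_m-g|_\nu\to 0$) is identical in spirit to the paper's.
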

\begin{proof}
Suppose $g=\re f_{\psi,\varphi}$, with $\psi\in
[N_{\mathrm{h}}\varphi]$, $\varphi\in{\mathcal S}_N(\nu)$.
Then, by \eqref{CSI1}
\[
|g(x)|\leq
\|\sqrt{x}\psi\|\,\sqrt{\nu(x)}=\sqrt{\langle x\psi,\psi\rangle}\,\sqrt{\nu(x)}
\]
follows, for each $x\in N_+$. Since $g$ is hermitian, in this case $g(x)\in {\mathbb R}$. Hence, for $x\in N_+$ with $\nu(x)\not=0$, according to the previous
\begin{equation}\label{CSI2}
    \frac{g(x)^2}{\nu(x)\phantom{^2}}\leq \langle x\psi,\psi\rangle
\end{equation}
Let  $\{y\}$ be a finite
positive decomposition of the unity in $N$. From \eqref{CSI2} then
\[
{\sum_{k}}^\prime
\frac{g(y_k)^2}{\nu(y_k)\phantom{^2}}\leq \|\psi\|^2
\]
is inferred to hold, with the $^\prime$ indicating that the summation has to be extended
only about those terms with non-vanishing denominator $\nu(y_k)$.
In view of Definition \ref{e.1} and formula \eqref{tangentnorm}, when applied over the $vN$-algebra $M=N$, from the previous
\begin{equation}\label{CSI3}
\forall\,g\in {\mathfrak H}_\nu(N)\,:\ \|g\|_\nu\leq
\frac{1}{2}\,|g|_\nu
\end{equation}
can be followed.
Suppose now $x\in N_{\mathrm{h}}$,
with spectrum consisting of finitely many spectral values only, say
${\mathsf{spec}}(x)=\{\lambda_1,\ldots,\lambda_m\}$ with $m\in {{\mathbb{N}}}$,
and spectral decomposition $x=\sum_j \lambda_j p_j$, with
corresponding orthogonal decomposition $\{p\}$ of the unity into
spectral projections. Let us consider the hermitian linear form $\nu_x\in {\mathfrak H}_\nu(N)$ defined in accordance with Remark \ref{scalar}\,\eqref{scalar1}.
Note that since $\nu_x(p_j)=\lambda_j\nu(p_j)$
holds, for each $j\leq m$, one can conclude as follows:
$$\|\nu_x\|_\nu^2\geq \frac{1}{4}\,{\sum_{j}}^\prime
\frac{\nu_x\left(p_j\right)^2}{\nu(p_j)\phantom{^2}}=\frac{1}{4}\,{\sum_{j}}^\prime
\lambda_j^2\nu(p_j)=\frac{1}{4}\,\nu(x^2)=\frac{1}{4}\,\|x\varphi\|^2=\frac{1}{4}\,|\nu_x|_\nu^2$$
Thereby, according to Remark \ref{scalar}\,\eqref{scalar2} it has been used that $|\nu_x|_\nu=\sqrt{\nu(x^2)}$ has to be fulfilled. By taking square roots one gets  $\|\nu_x\|_\nu\geq \frac{1}{2}\,|\nu_x|_\nu$. In view
to \eqref{CSI3} then equality has to occur, that is, for
$x\in N_{\mathrm{h}}$ with finite spectrum
\begin{equation}\label{CSI4}
    \|\nu_x\|_\nu=
\frac{1}{2}\,|\nu_x|_\nu
\end{equation}
 has to hold. Since each hermitan element of
$N$ can be approximated arbitrarily well in the uniform sense by
hermitian elements of
$N$ with finite spectrum, to given $\psi\in
[N_{\mathrm{h}}\varphi]$ especially one
finds such a sequence $\{x_n\} \subset N_{\mathrm{h}}$ and
obeying $\psi=\lim_{n\to\infty} x_n\varphi$. Hence, in view of Lemma
\ref{euk1} the following is inferred to hold:
\begin{subequations}\label{CSI5}
\begin{equation}\label{CSI5a}
|\cdot|_\nu-\lim_{n\to\infty}
\nu_{x_n}=g=\re f_{\psi,\varphi}
\end{equation}
From the latter and by \eqref{CSI4} we then see that
\begin{equation}\label{CSI5b}
 \frac{1}{2}\,|g|_\nu=
\lim_{n\to\infty} \|\nu_{x_n}\|_\nu\,.
\end{equation}
\end{subequations}
On the other hand, by \eqref{CSI3} and by subadditivity of $\|\cdot\|_\nu$, for any $n$, conclude that
$$
\|\nu_{x_n}\|_\nu=\|g+(\nu_{x_n}-g) \|_\nu\leq \|g\|_\nu+\|\nu_{x_n}-g\|_\nu\leq \|g\|_\nu+\frac{1}{2}\,|\nu_{x_n}-g|_\nu
$$
is fulfilled. Hence,  for all $n$,
$$
\|\nu_{x_n}\|_\nu\leq \|g\|_\nu+\frac{1}{2}\,|\nu_{x_n}-g|_\nu$$
Now, if both the relations of \eqref{CSI5} are taken into account, in the limit for $n\to \infty$  these estimates result in
$\frac{1}{2}\,|g|_\nu\leq \|g\|_\nu$. This together with \eqref{CSI3} yields $$\|g\|_\nu=\frac{1}{2}\,|g|_\nu$$ for each
$g\in{\mathfrak H}_\nu(N)$.
\end{proof}
Now, let $x$ be a hermitian linear operator $x\in N_h$, with spectral representation
\begin{subequations}\label{specrep0}
\begin{equation}\label{specrep}
    x=\int_{\alpha}^{\beta^+} \lambda\,E_x(d\/\lambda)
\end{equation}
with left-continuous (strong operator topology) spectral resolution $E_x$, which for $\lambda\in {\mathbb R}$ is the support of the positive part of
the element $\lambda{\mathsf 1}-x$ within $ N$,
\begin{equation}\label{specrep1}
E_x(\lambda)=s((\lambda{\mathsf 1}-x)_+)
\end{equation}
\end{subequations}
and with $[\alpha,\beta]\subset {\mathbb R}$ compact with ${\mathsf{spec}}(x)\subset [\alpha,\beta]$. Let $\{x_n\}$ be a sequence of approximating hermitian linear operators for $x$ of the form
\[
x_n=\sum_i \lambda_i^{(n)}\,E_x(\Delta \lambda_i^{(n)})
\]
with $ \lambda_1^{(n)}>\lambda_2^{(n)}>\cdots > \lambda_{m_n}^{(n)}$, $m_n\in {{\mathbb{N}}}$, $ \lambda_1^{(n)}> \beta>\alpha\geq  \lambda_{m_n}^{(n)}$, and orthoprojections $$E_x(\Delta \lambda_i^{(n)})=E_x( \lambda_i^{(n)})-E_x( \lambda_{i+1}^{(n)})$$ for $i=1,2,\ldots,m_n-1$. Then $\{E_x( \Delta\lambda_i^{(n)})\}\subset N$  for each $n$ is a finite decomposition of the unity into mutually orthogonal orthoprojections. Also, if $$\lim_{n\to\infty}\max_i |\lambda_i^{(n)}-\lambda_{i+1}^{(n)}|=0$$ is supposed, one can be assured that $x\varphi=\lim_{n\to\infty} x_n\varphi$ is fulfilled. Hence, $\nu_x=\|\cdot\|_\nu-\lim_{n\to\infty} \nu_{x_n}$ and $\nu_x=\|\cdot\|_1-\lim_{n\to\infty} \nu_{x_n}$. From the latter, and since the arguments used in deriving formula \eqref{CSI4} apply to each $x_n$ and then show that
\[
\|\nu_{x_n}\|_\nu=\frac{1}{2}\,\sqrt{{\sum_i}^\prime\frac{\nu_{x_n}\bigl(E_x( \Delta\lambda_i^{(n)})\bigr)^2}{\nu\bigl(E_x( \Delta\lambda_i^{(n)})\bigr)}}
\]
is fulfilled for each $n$, we conclude that
\begin{equation}\label{orthoeuk1}
\frac{1}{2}\,\sqrt{\nu(x^2)}= \|\nu_{x}\|_\nu=\sup_{{\mathcal E}_x}{\frac{1}{2}\,\sqrt{{\sum_i}^\prime\frac{\nu_{x}\bigl(E_x( \Delta\lambda_i)\bigr)^2}{\nu\bigl(E_x( \Delta\lambda_i)\bigr)}}}
\end{equation}
with the supremum extending over the set ${\mathcal E}_x$ of all finite decompositions of the unity into mutually orthogonal orthoprojections $E_x( \Delta\lambda_i)=E_x( \lambda_i)-E_x( \lambda_{i+1})$ which can be obtained for any choice of finitely many reals $\lambda_j$, with $ \lambda_1>\lambda_2>\cdots > \lambda_{m}$, $m\in {{\mathbb{N}}}$ arbitrarily chosen, and thereby obeying $ \lambda_1>\beta>\alpha\geq \lambda_m$.
\begin{remark}\label{orthoeuk}
In view of Remark \ref{scalar}\,\eqref{scalar2},  on a $vN$-algebra $N$ from formula \eqref{orthoeuk1}
\[
\frac{1}{2}\,|f|_\nu=\|f\|_\nu =\sup_{\{e\}}\frac{1}{2}\,\sqrt{{\sum_{j}}^\prime
\frac{f(e_j)^2}{\nu(e_j)}}
\]
is obtained, for each $f\in{\mathfrak H}_\nu(N)$, with the supremum extending over
the set of all finite
decompositions of the unit into mutually
orthogonal orthoprojections of $N$.
\end{remark}

\subsubsection{Density results for the tangent norm}\label{poseuk}
Let ${\mathfrak A}$ be a ${\mathsf C}^*$-algebra of bounded linear operators acting on the Hilbert space ${\mathcal H}$. Suppose ${\mathsf 1}\in {\mathfrak A}$ and $N={\mathfrak A}^{\,\prime\prime}$. Relating formula \eqref{tangentnorm}, if seen in context of $N$ or ${\mathfrak A}$ respectively, in respect of elements of ${\mathfrak H}_\nu(N)$ and vector states on $N$ the following density result holds.
\begin{lemma}\label{dichte}
Suppose $\nu\in {\mathcal S}(N)$  is a vector state and $f\in {\mathfrak H}_\nu(N)$. Then
\begin{equation}\label{dens}
    \|f\|_\nu=\|f|{\mathfrak A}\|_{\nu|{\mathfrak A}}
\end{equation}
where $\nu|{\mathfrak A}$ and $f|{\mathfrak A}$ are the restrictions of $\nu$ and $f$ from $N$ to ${\mathfrak A}$.
\end{lemma}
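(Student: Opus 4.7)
One inequality is immediate: every finite positive decomposition of the unity in ${\mathfrak A}$ is also one in $N$, so the supremum defining $\|f\|_\nu$ in \eqref{tangentnorm} is taken over a larger index set than the one defining $\|f|{\mathfrak A}\|_{\nu|{\mathfrak A}}$, giving $\|f|{\mathfrak A}\|_{\nu|{\mathfrak A}}\leq\|f\|_\nu$. The real content is the reverse estimate, and my plan is to reduce it to a purely commutative computation inside the unital ${\mathsf C}^*$-subalgebra $C^*(x,{\mathsf 1})$ for appropriately chosen hermitian $x\in{\mathfrak A}$.

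The starting move is to fix $\varphi\in{\mathcal S}_N(\nu)$ and use Lemma \ref{euk1} to write $f=\re f_{\psi_0,\varphi}$ with $\psi_0\in[N_{\mathsf h}\varphi]$ the unique vector obeying $|f|_\nu=\|\psi_0\|=2\|f\|_\nu$ (the last equality being Lemma \ref{normformel}). By the Kaplansky density theorem the unit ball of ${\mathfrak A}_{\mathsf h}$ is strongly dense in the unit ball of $N_{\mathsf h}$, whence $[N_{\mathsf h}\varphi]=[{\mathfrak A}_{\mathsf h}\varphi]$, and there is a sequence $x_n\in{\mathfrak A}_{\mathsf h}$ with $x_n\varphi\to\psi_0$. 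Lemma \ref{normformel} then yields $\|f-\nu_{x_n}\|_\nu=\tfrac12\|\psi_0-x_n\varphi\|\to 0$. Applying the already-established trivial inequality to $h=f-\nu_{x_n}\in{\mathfrak H}_\nu(N)$, together with the (evident) subadditivity of $\|\cdot|{\mathfrak A}\|_{\nu|{\mathfrak A}}$, I get $\bigl|\,\|f|{\mathfrak A}\|_{\nu|{\mathfrak A}}-\|\nu_{x_n}|{\mathfrak A}\|_{\nu|{\mathfrak A}}\,\bigr|\leq\|f-\nu_{x_n}\|_\nu\to 0$. The proof thus reduces to the key identity
\[
\|\nu_x|{\mathfrak A}\|_{\nu|{\mathfrak A}}=\tfrac12\sqrt{\nu(x^2)}\qquad(x\in{\mathfrak A}_{\mathsf h}),
\]
since the right-hand side is $\tfrac12\|x_n\varphi\|\to\tfrac12\|\psi_0\|=\|f\|_\nu$.

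For the key identity, the bound ``$\leq$'' is immediate from the universal Cauchy--Schwarz estimate \eqref{CSI1}: for each $y\in N_+$, $\nu_x(y)^2\leq\langle y\,x\varphi,x\varphi\rangle\,\nu(y)$, hence ${\sum_j}^{\,\prime}\nu_x(y_j)^2/\nu(y_j)\leq\|x\varphi\|^2=\nu(x^2)$ for any positive decomposition of the unity. For ``$\geq$'' I pass to the commutative unital ${\mathsf C}^*$-subalgebra $B=C^*(x,{\mathsf 1})\subset{\mathfrak A}$, so that $\|\nu_x|B\|_{\nu|B}\leq\|\nu_x|{\mathfrak A}\|_{\nu|{\mathfrak A}}$. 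Identifying $B\cong C(K)$ via Gelfand with $K={\mathsf{spec}}(x)\subset\mathbb R$, $\nu|B$ becomes integration against a Borel probability measure $\mu$ on $K$, whereas $\nu_x|B(g)=\int g(\lambda)\,\lambda\,d\mu(\lambda)$ for $g\in C(K)$. Given $\delta>0$, I partition a compact interval containing $K$ into subintervals of length $\leq\delta$, build a subordinate continuous partition of unity $\{g_1,\ldots,g_m\}\subset C(K)_+$ with $\sum_i g_i={\mathsf 1}$, and estimate by an elementary Riemann-sum argument that
\[
{\sum_i}^{\,\prime}\frac{\bigl(\int g_i(\lambda)\,\lambda\,d\mu\bigr)^{\!2}}{\int g_i\,d\mu}\;=\;\int\lambda^2\,d\mu+{\mathbf o}(1)\quad(\delta\to 0).
\]
Since $\{g_1,\ldots,g_m\}$ is a finite positive decomposition of the unity in $B\subset{\mathfrak A}$, this gives $\|\nu_x|{\mathfrak A}\|_{\nu|{\mathfrak A}}\geq\tfrac12\sqrt{\int\lambda^2\,d\mu}=\tfrac12\sqrt{\nu(x^2)}$, and the identity follows.

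The main obstacle is precisely this commutative step. In $N$ one has the luxury of Remark \ref{orthoeuk}, where the supremum can be realized by orthogonal spectral projections of $x$; in a general unital ${\mathsf C}^*$-algebra ${\mathfrak A}$ those spectral projections need not belong to ${\mathfrak A}$, so one must replace them by continuous ``bump'' partitions of unity in $C^*(x,{\mathsf 1})$ and show that the Riemann-sum approximation is tight enough to recover the full $\int\lambda^2\,d\mu$. This is the only place where leaving the von-Neumann setting really costs anything; everything else, including the reduction from arbitrary $f\in{\mathfrak H}_\nu(N)$ to special elements of the form $\nu_x$ with $x\in{\mathfrak A}_{\mathsf h}$, is fed by Kaplansky density and the Hilbert-space geometry already packaged in Lemmas \ref{euk1} and \ref{normformel}.
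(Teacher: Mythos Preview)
Your proof is correct and follows the same three-stage architecture as the paper: the trivial direction $\|f\|_\nu\geq\|f|{\mathfrak A}\|_{\nu|{\mathfrak A}}$, the key identity $\|\nu_x|{\mathfrak A}\|_{\nu|{\mathfrak A}}=\tfrac12\sqrt{\nu(x^2)}$ for $x\in{\mathfrak A}_{\mathsf h}$, and the extension to all of ${\mathfrak H}_\nu(N)$ by Kaplansky density and the $\|\cdot\|_\nu$-continuity encoded in Lemmas~\ref{euk1}/\ref{normformel}.

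The only real divergence is in how the ``$\geq$'' half of the key identity is obtained. The paper stays inside $N$: it fixes grid points $\lambda_1>\cdots>\lambda_m$, uses \eqref{orthoeuk1} to reduce to the spectral-projection sum ${\sum_i}'\nu_x(E_x(\Delta\lambda_i))^2/\nu(E_x(\Delta\lambda_i))$, and then approximates each $E_x(\lambda_i)$ strongly by the ${\mathsf C}^*$-algebraic elements $a_i^{(n)}=\sqrt[n]{(\lambda_i{\mathsf 1}-x)_+}\in{\mathfrak A}_+$, invoking normality of $\nu$ and $\nu_x$ to pass the values through the limit. You instead bypass the spectral projections entirely: working in $B=C^*(x,{\mathsf 1})\cong C(K)$ you take a continuous partition of unity with small supports and run a direct variance estimate $\int g_i\lambda^2\,d\mu-\alpha_i^2\mu_i=\int g_i(\lambda-\alpha_i)^2\,d\mu\leq\delta^2\mu_i$. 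Both arguments live inside the commutative algebra generated by $x$ (so both yield the commuting-decomposition refinement of Remark~\ref{comm1}); yours is a touch more self-contained in that it never leaves ${\mathfrak A}$ and does not need the normality of $\nu_x$ on $N$, while the paper's route makes the connection to the spectral formula \eqref{orthoeuk1} explicit.
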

\begin{proof}
Due to $N\supset {\mathfrak A}$ the set of all finite positive decompositions of the unity within ${\mathfrak A}$ is a subset of all  finite positive decompositions of the unity within $N$. When applying Definition \ref{e.1} to the situation with $\nu$ and $f$, or $\nu|{\mathfrak A}$ and $f|{\mathfrak A}$ respectively,
\begin{equation}\label{dens0}
    \|f\|_\nu\geq \|f|{\mathfrak A}\|_{\nu|{\mathfrak A}}
\end{equation}
is seen.
First we are going to show that \eqref{dens} holds for $f=\nu_x$ with $x\in{\mathfrak A}_{\mathsf h}$, i.e.~
\begin{equation}\label{dens1}
    \|\nu_x\|_\nu=\|\nu_x|{\mathfrak A}\|_{\nu|{\mathfrak A}}
\end{equation}
see Remark \ref{scalar}\,\eqref{scalar1}.
Remark that owing to the property that $ \|\nu_{\lambda x}\|_\nu=\lambda \, \|\nu_x\|_\nu$ holds for any real $\lambda>0$, one may content with verifying \eqref{dens1} for $x\in{\mathfrak A}_{\mathsf h}$ with $\|x\|<1/2$. Suppose this case, and let $\alpha$ and $\beta$ be the lower and upper bound of ${\mathsf{spec}}(x)$, respectively. Let real $\varepsilon > 0$ be chosen. Because of $x\in {\mathfrak A}_{\mathsf h}\subset N_{\mathsf h}$ a spectral representation \eqref{specrep} of $x$ with respect to $N$ exists and formula \eqref{orthoeuk1} then can be applied and shows that reals $ \lambda_1>\lambda_2>\cdots > \lambda_{m}$, $m\in {{\mathbb{N}}}$, with $\lambda_m\leq\beta$ and $\lambda_1 >\alpha$, can be chosen such that
\begin{equation}\label{appl2}
   \|\nu_x\|_\nu \geq {\frac{1}{2}\,\sqrt{{\sum_i}^\prime\frac{\nu_{x}\bigl(E_x( \Delta\lambda_i)\bigr)^2}{\nu\bigl(E_x( \Delta\lambda_i)\bigr)}}}\geq \|\nu_x\|_\nu- \varepsilon
\end{equation}
with $E_x\bigl( \Delta\lambda_i\bigr)=E_x\bigl( \lambda_i\bigr)-E_x\bigl( \lambda_{i+1}\bigr)$, for all $i< m$. By assumption on $x$ we may suppose that $\lambda_1 < 1/2$ and $\lambda_m > -1/2$ hold.
Now, for each $n\in {{\mathbb{N}}}$, let us define
\begin{equation}\label{spec}
    a_i^{(n)}=\sqrt[n]{(\lambda_i\,{\mathsf 1}-x)_+}
\end{equation}
for $i=1,2,\ldots,m$. Remind that if $y\in {\mathfrak A}_{\mathsf h}$ holds, then also the positive and negative parts $y_\pm$ of $y$ both belong to ${\mathfrak A}$ again, and for a positive operator $y\in {\mathfrak A}_+$, the  $n$-th root of $y$ belongs to ${\mathfrak A}_+$ again. Thus, $$\bigl\{a_1^{(n)},a_2^{(n)},\ldots,a_m^{(n)}\bigr\}\subset {\mathfrak A}_+$$ On the other hand, by spectral calculus, with respect to the $vN$-algebra $N$ one has
\[
a_i^{(n)}=\int_{\alpha}^{\lambda_i} \sqrt[n]{\lambda_i-t}\ E_x(d\/t)
\]
From this formula and since  $\frac{1}{2}>\lambda_1>\alpha>-\frac{1}{2}$ are  fulfilled, for each $n\in {{\mathbb{N}}}$
\begin{subequations}\label{specc}
\begin{equation}\label{spec1}
   {\mathsf 1}\geq a_1^{(n)} \geq  a_2^{(n)}\geq \cdots\geq a_m^{(n)}
\end{equation}
and existence of the strong limit of $ a_i^{(n)}$ as $n\to\infty$ with
\begin{equation}\label{spec2}
E_x(\lambda_i)\xi=\lim_{n\to \infty} a_i^{(n)}\xi\,, \text{ for any }\xi\in {\mathcal H}
\end{equation}
\end{subequations}
from \eqref{specrep1} and \eqref{spec} can be followed. Hence, in defining
\begin{subequations}\label{speccc}
\begin{eqnarray}
\label{spec3}
  x_0^{(n)} &=& {\mathsf 1}- a_1^{(n)} \\
\label{spec4}
  x_i^{(n)} &=& a_i^{(n)}- a_{i+1}^{(n)}\,,\text{ for }i=1,2,\ldots,m-1
\end{eqnarray}
\end{subequations}
from \eqref{spec1} and \eqref{spec3} and since,  due to $\lambda_m<\alpha$,   $a_m^{(n)}={\mathsf 0}$ is fulfilled, we see that $$X^{(n)}=\{x_0^{(n)},x_1^{(n)},\ldots,x_{m-1}^{(n)}\}\subset {\mathfrak A}_+$$ has to be a finite positive decomposition of the unity within ${\mathfrak A}$, for each $n\in {{\mathbb{N}}}$. On the other hand, as a consequence of \eqref{spec2} from
\eqref{speccc} one concludes that
\begin{subequations}\label{speclim}
\begin{eqnarray}
\label{spec5}
  {\mathsf 0} &=& \lim_{n\to \infty} x_0^{(n)}\xi \\
\label{spec6}
  E_x(\Delta\lambda_i)\xi &=& \lim_{n\to \infty} x_i^{(n)}\xi\,,\text{ for }i=1,2,\ldots,m-1
\end{eqnarray}
\end{subequations}
and any $\xi\in {\mathcal H}$. But then, owing to  $\nu_x=\re{f_{x\varphi,\varphi}}$, with $\varphi\in {\mathcal S}(\nu)$,
by \eqref{spec6}
\begin{eqnarray*}
  \nu\bigl(E_x(\Delta\lambda_i)\bigr)&=& \lim_{n\to\infty} \nu(x_i^{(n)}) \\
  \nu_x\bigl(E_x(\Delta\lambda_i)\bigr) &=& \lim_{n\to\infty} \nu_x(x_i^{(n)})
\end{eqnarray*}
for $i=1,2,\ldots,m-1$ follow. Hence, in view of Definition \ref{e.1} and by \eqref{appl2}, for $n$ sufficiently large, say for $n\geq n(\varepsilon)$ with some $n(\varepsilon)$, we can be assured that
\begin{equation}\label{app3}
\|\nu_x|{\mathfrak A}\|_{\nu|{\mathfrak A}} \geq \frac{1}{2}\,\sqrt{{\sum_{i\geq 0}}^\prime\frac{\nu_{x}\bigl( x_i^{(n)}\bigr)^2}{\nu\bigl( x_i^{(n)}\bigr)}} \geq  \frac{1}{2}\,\sqrt{{\sum_{i\geq 1}}^\prime\frac{\nu_{x}\bigl( x_i^{(n)}\bigr)^2}{\nu\bigl( x_i^{(n)}\bigr)}}\geq \|\nu_x\|_\nu- 2\,\varepsilon
\end{equation}
According to \eqref{dens0} we have $ \|\nu_x\|_\nu\geq\|\nu_x|{\mathfrak A}\|_{\nu|{\mathfrak A}}$.  In view of \eqref{app3} this implies that $$
\|\nu_x\|_\nu\geq \|\nu_x|{\mathfrak A}\|_{\nu|{\mathfrak A}} \geq \|\nu_x\|_\nu- 2\,\varepsilon$$
Since $\varepsilon>0$ can be chosen at will, from this \eqref{dens1} follows, for all $x\in {\mathfrak A}_{\mathsf h}$ with $\|x\|<\frac{1}{2}$. Since each element of ${\mathfrak A}_{\mathsf h}$ is a positive multiple of such ones, from this  \eqref{dens1} in general follows. Now, let  $f\in {\mathfrak H}_\nu(N)$. By the von-Neumann and Kaplansky density theorems, $N_{\mathsf h}$ is the strong closure of ${\mathfrak A}_{\mathsf h}$. Hence, the set $\{\nu_x:\,x\in{\mathfrak A}_{\mathsf h}\}$ is a $\|\cdot\|_\nu$-dense subset of ${\mathfrak H}_\nu(N)$. Now, let $\{x_n\}\subset {\mathfrak A}_{\mathsf h}$ with $\|\cdot\|_\nu-\lim_{n\to\infty} \nu_{x_n}=f$. By the just verified \eqref{dens1} and by subadditivity of $\|\cdot\|_{\nu|{\mathfrak A}}$ we then get
\[
\|\nu_{x_n}\|_{\nu}=\|\nu_{x_n}|{\mathfrak A}\|_{\nu|{\mathfrak A}}=\|f|{\mathfrak A}+(\nu_{x_n}-f)|{\mathfrak A}\|_{\nu|{\mathfrak A}}\leq \|f|{\mathfrak A}\|_{\nu|{\mathfrak A}}+\|(\nu_{x_n}-f)|{\mathfrak A}\|_{\nu|{\mathfrak A}}
\]
In view of \eqref{dens0} (applied with $\nu_{x_n}-f$ instead of $f$ there) from the previous estimate
\[
\|\nu_{x_n}\|_{\nu}\leq \|f|{\mathfrak A}\|_{\nu|{\mathfrak A}}+\|\nu_{x_n}-f\|_{\nu}
\]
is obtained, for any $n\in {{\mathbb{N}}}$. In the limit for $n\to\infty$ from the latter one is arriving at the estimate $\|f\|_\nu\leq  \|f|{\mathfrak A}\|_{\nu|{\mathfrak A}}$. In view of this and \eqref{dens0} then \eqref{dens} follows.
\end{proof}
\begin{remark}\label{comm1}
According to \eqref{spec}/\eqref{speccc}, each $X^{(n)}$ is a commuting system of elements of ${\mathfrak A}_+$. As a consequence of this, formula \eqref{tangentnorm} can be simplified
\begin{equation}\label{tangentnorm1}
\|f|{\mathfrak A}\|_{\nu|{\mathfrak A}}=\sup_{\{x\}\subset {\mathfrak A}_+,\text{ commuting}}{\frac{1}{2}}\,\sqrt{{\sum_{j}}^\prime
\frac{f(x_j)^2}{\nu(x_j)\phantom{^2}}}
\end{equation}
 \end{remark}
Now, from \eqref{tangentnorm} (with  $M=N$) it easily follows that
\begin{equation}\label{traum}
{\mathfrak T}_\nu(N)=\bigl\{g\in N_{\mathsf h}^*\,:\
\|g\|_\nu<\infty\bigr\}
\end{equation}
is a linear subspace of hermitian linear forms over $N$, in restriction to which
$\|\cdot\|_\nu$ is a norm and, as will be shown now, for a vector state yields another form of \eqref{euk.1}.
\begin{lemma}\label{neben}
${\mathfrak T}_\nu(N)={\mathfrak H}_\nu(N)$ holds, at each vector state $\nu\in {\mathcal S}(N)$.
\end{lemma}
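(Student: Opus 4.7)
My plan is to establish the two inclusions separately. The inclusion $\mathfrak{H}_\nu(N) \subset \mathfrak{T}_\nu(N)$ is essentially free: each $g \in \mathfrak{H}_\nu(N)$ is hermitian by construction, and Lemma \ref{normformel} yields $\|g\|_\nu = \tfrac{1}{2}|g|_\nu < \infty$, placing $g$ in $\mathfrak{T}_\nu(N)$.

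For the non-trivial inclusion $\mathfrak{T}_\nu(N) \subset \mathfrak{H}_\nu(N)$ my strategy is to establish the fundamental estimate
\[
|g(x)| \,\leq\, 2\|g\|_\nu\,\|x\varphi\|, \qquad x \in N_{\mathsf h},
\]
with $\varphi \in \mathcal{S}_N(\nu)$ fixed, and then to apply Riesz representation on the real Hilbert space $[N_{\mathsf h}\varphi] \subset \mathcal{H}_{\mathbb R}$ to the real-linear functional $\Lambda(x\varphi) := g(x)$; the estimate ensures simultaneously well-definedness of $\Lambda$ (if $(x-y)\varphi = 0$ then $g(x-y)=0$) and continuity. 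The main obstacle is proving the estimate, and I proceed in three preparatory steps. First, I show that $g$ annihilates every orthoprojection $e \in N$ with $\nu(e) = 0$: testing the supremum defining $\|g\|_\nu^2$ against the two-element positive decomposition of unity $\{e+\epsilon({\mathsf 1}-e),\,(1-\epsilon)({\mathsf 1}-e)\}$ produces a term behaving as $g(e)^2/\epsilon$ as $\epsilon \downarrow 0$, which would diverge unless $g(e) = 0$. Second, for $x \in N_{\mathsf h}$ with finite spectrum, $x = \sum_i \lambda_i e_i$ with $\{e_i\}$ an orthogonal decomposition of the unit, the previous step eliminates the zero-denominator indices and the Cauchy--Schwarz inequality gives
\[
g(x)^2 \,=\, \Bigl({\sum_i}^\prime \lambda_i\sqrt{\nu(e_i)}\cdot g(e_i)/\sqrt{\nu(e_i)}\Bigr)^{\!2} \,\leq\, \nu(x^2)\cdot 4\|g\|_\nu^2,
\]
where ${\sum_i}^\prime g(e_i)^2/\nu(e_i) \leq 4\|g\|_\nu^2$ follows directly from the definition of $\|g\|_\nu$. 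Third, the general case is handled by uniform spectral approximation: choosing $x_n \to x$ in operator norm with each $x_n$ of finite spectrum, norm-continuity of $g$ transfers the estimate since also $\|x_n\varphi - x\varphi\| \to 0$ in $\mathcal{H}$.

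With the fundamental estimate in hand, $\Lambda$ extends by continuity to a bounded real-linear functional on $[N_{\mathsf h}\varphi]$, and Riesz delivers a unique $\psi_0 \in [N_{\mathsf h}\varphi]$ with $\Lambda(\xi) = \langle \xi, \psi_0\rangle_{\mathbb R}$ for all $\xi \in [N_{\mathsf h}\varphi]$. For $x = x^* \in N$ this yields $g(x) = \langle x\varphi,\psi_0\rangle_{\mathbb R} = \re\langle x\psi_0,\varphi\rangle = \re f_{\psi_0,\varphi}(x)$, where hermiticity of $x$ was used to shift $x$ across the inner product. Since both $g$ and $\re f_{\psi_0,\varphi}$ are hermitian linear forms and agree on $N_{\mathsf h}$, they agree on all of $N$. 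Hence $g \in \mathfrak{H}_\nu(N)$, finishing the proof.
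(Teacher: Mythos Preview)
Your proof is correct and follows essentially the same architecture as the paper's: establish the key estimate $|g(x)| \leq 2\|g\|_\nu\|x\varphi\|$ first for hermitian $x$ with finite spectrum, extend by uniform density, and then invoke Riesz representation on $[N_{\mathsf h}\varphi]$. The one minor difference is in how the finite-spectrum case is handled: the paper restricts $g$ to the commutative subalgebra $R$ generated by $x$, represents $g|R$ explicitly as $\re f_{\psi,\varphi}|R$ via the Radon--Nikodym-type coefficients $s_j = f(p_j)/\nu(p_j)$, and then applies Lemma~\ref{euk1} and Lemma~\ref{normformel} on $R$ together with the monotonicity $\|g|R\|_{\nu|R}\leq\|g\|_\nu$; you instead reach the same estimate by a direct Cauchy--Schwarz bound on the spectral sum. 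Your route is slightly more elementary and avoids the detour through $R$, while the paper's route has the aesthetic advantage of reusing the earlier lemmas; the outcome and the remainder of the argument are identical.
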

\begin{proof}
Suppose $x\in N_{\mathsf h}\backslash\{{\mathsf 0}\}$, with spectrum consisting of finitely many spectral values, say $n$. Then, $x=\sum_j^n \lambda_j p_j$,
with $\mathsf{spec}(x)=\{\lambda_1,\lambda_2,\ldots,\lambda_n\}\subset {\mathbb R}$, and $\{p_1,p_2,\ldots,p_n\}$ being a decomposition of the unit operator ${\mathsf 1}$ into mutually orthogonal orthoprojections. Let $R$ be the commutative $vN$-subalgebra of $N$ generated by $x$ and ${\mathsf 1}$. Let $f\in {\mathfrak T}_\nu(N)$, $f\not=0$. We are going to show that then $f\in {\mathfrak H}_\nu(N)$. In fact, if $f(p_j)\not=0$, then owing to $\|f\|_\nu<\infty$ from \eqref{tangentnorm} the estimate  $|f(p_j)|\leq 2 \,\nu(p_j)\,\|f\|_\nu$ follows. Hence, if $f(p_j)\not=0$ for some $j$, then $\nu(p_j)\not=0$. Define $s_j\in {\mathbb R}$ by
\[
s_j=
\begin{cases}
\frac{f(p_j)}{\nu(p_j)} & \text{ for }f(p_j)\not=0,\\
&\\
\hfill{} 0\hfill{} & \text{ else. }
\end{cases}
\]
For $\varphi\in {\mathcal S}_N(\nu)$, define $\psi=\Bigl(\sum_j^n s_j p_j\Bigr)\varphi$. Then, the restriction $f|R$ of $f$ to $R$ reads
\[
f|R=\re f_{\psi,\varphi} |R
\]
 with $\psi\in R_{\mathsf h}\varphi$. Hence, Lemma \ref{euk1} and Lemma \ref{normformel} with respect to $R$ can be applied and the corresponding equivalent to \eqref{fund1} then reads as
\[
|f(x)|\leq |f|R|_{\nu|R}\, \|x\varphi\|=2\,\|f|R\|_{\nu|R} \|x\varphi\|
\]
But note that from formula \eqref{tangentnorm} owing to $R\subset N$ obviously $$\|f|R\|_{\nu|R}\leq \|f\|_\nu$$ follows. Thus, for the chosen $x\in N_{\mathsf h}$ thus $|f(x)|\leq 2\,\|f\|_\nu\, \|x\varphi\|$ is seen to hold. But since for  $x\in N_{\mathsf h}\backslash\{{\mathsf 0}\}$ any hermitian element of $N$ with spectrum consisting of finitely many spectral values could have been chosen, and these elements form a uniformly dense subset of $N_{\mathsf h}$, by continuity we may conclude that
\begin{equation}\label{fazit}
  |f(x)|\leq 2\,\|f\|_\nu \,\|x\varphi\|
\end{equation}
holds at each $x\in N_{\mathsf h}$, and any given $f\in {\mathfrak T}_\nu(N)$, $f\not=0$. Since for $f=0$ this  trivially remains valid,
\eqref{fazit} shows
that at each $f\in {\mathfrak T}_\nu(M)$,
the prescription $N_{\mathrm{h}}\varphi\ni x\varphi\,\longmapsto\,f(x)\in {{\mathbb{R}}}$
is a well-defined bounded map.
Obviously the mentioned map
is linear, hence is a bounded real linear form on the real pre-Hilbert space
$N_{\mathrm{h}}\varphi
\subset {\mathcal H}_{{\mathbb{R}}}$. But then there has to exist unique
$\phi\in [N_{\mathrm{h}}\varphi]$ such that
$f(x)=\langle \phi,x\varphi\rangle_{{\mathbb{R}}}=\langle x\phi,\varphi\rangle_{{\mathbb{R}}}$ is
fulfilled,
for each $x\in N_{\mathrm{h}}$. Since
$f\in N^*_{\mathrm{h}}$ holds and the hermitian elements of $N$ separate the points of
$N^*$, the previous formula for $f$ finally gives that $f=\re f_{\phi,\varphi}$ has to be fulfilled,
with $\varphi\in {\mathcal S}_N(\nu)$. According to \eqref{euk.1} $f\in {\mathfrak H}_\nu(N)$
follows. Hence,
${\mathfrak T}_\nu(N)\subset {\mathfrak H}_\nu(N)$ is seen. The assertion follows from this and
since according to \eqref{traum} and Lemma \ref{normformel} the other inclusion
${\mathfrak T}_\nu(N)\supset {\mathfrak H}_\nu(N)$ is valid.
\end{proof}
An important consequence of Lemma \ref{neben} is the following result about \eqref{tangentnorm}.
\begin{corolla}\label{haupt}
Let $N$ be a $vN$-algebra acting on a Hilbert space ${\mathcal H}$. Suppose ${\mathfrak A}$ is a ${\mathsf C}^*$-algebra with ${\mathsf 1}\in {\mathfrak A}\subset N$ and ${\mathfrak A}^{\,\prime\prime}=N$. For each  vector state $\nu\in {\mathcal S}(N)$  and any ultrastrongly continuous hermitian linear form $f\in (N_*)_{\mathsf h}$, the following holds
\begin{equation}\label{densend}
    \|f\|_\nu=\|f|{\mathfrak A}\|_{\nu|{\mathfrak A}}
\end{equation}
with $\nu|{\mathfrak A}$ and $f|{\mathfrak A}$ being the restrictions of $\nu$ and $f$ from $N$ to ${\mathfrak A}$, respectively.
\end{corolla}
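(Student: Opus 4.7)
The plan is to reduce the statement to Lemma \ref{dichte} by verifying a dichotomy: the two sides of the claimed identity are simultaneously finite (and then equal by Lemma \ref{dichte}) or simultaneously infinite. First I would handle the easy case. If $f\in{\mathfrak H}_\nu(N)$, then $f$ is in particular a normal hermitian form and the result follows immediately from Lemma \ref{dichte}. It therefore remains to treat the complementary case: $f\notin{\mathfrak H}_\nu(N)$, which by Lemma \ref{neben} and \eqref{traum} is the same as $\|f\|_\nu=\infty$. The goal then is to prove $\|f|{\mathfrak A}\|_{\nu|{\mathfrak A}}=\infty$, and I would argue by contradiction, supposing $\|f|{\mathfrak A}\|_{\nu|{\mathfrak A}}<\infty$.

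Under this supposition, the pair $(f|{\mathfrak A},\nu|{\mathfrak A})$ sits inside ${\mathfrak T}_{\nu|{\mathfrak A}}({\mathfrak A})$, and the estimate \eqref{fazit} applied on the ${\mathsf C}^*$-algebra ${\mathfrak A}$ reads
\begin{equation*}
|f(x)|\leq 2\,\|f|{\mathfrak A}\|_{\nu|{\mathfrak A}}\,\|x\varphi\|,\qquad \forall\,x\in{\mathfrak A}_{\mathsf h},
\end{equation*}
for any chosen $\varphi\in{\mathcal S}_N(\nu)$. The heart of the proof is to lift this estimate to all of $N_{\mathsf h}$. For $x\in N_{\mathsf h}$, the Kaplansky density theorem provides a net $\{x_\alpha\}\subset{\mathfrak A}_{\mathsf h}$ with $\sup_\alpha\|x_\alpha\|\leq\|x\|$ and $x_\alpha\to x$ strongly; uniform boundedness converts this into ultrastrong convergence. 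Consequently $x_\alpha\varphi\to x\varphi$ in ${\mathcal H}$, whence $\|x_\alpha\varphi\|\to\|x\varphi\|$, while the ultrastrong continuity of $f$ yields $f(x_\alpha)\to f(x)$. Passing to the limit in the displayed inequality gives $|f(x)|\leq 2\,\|f|{\mathfrak A}\|_{\nu|{\mathfrak A}}\,\|x\varphi\|$ for every $x\in N_{\mathsf h}$.

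Now I would repeat the argument of the final part of the proof of Lemma \ref{neben}: the preceding bound shows that the prescription $N_{\mathsf h}\varphi\ni x\varphi\longmapsto f(x)$ is a well-defined bounded real linear functional on $N_{\mathsf h}\varphi\subset{\mathcal H}_{\mathbb R}$, which extends by continuity to $[N_{\mathsf h}\varphi]$ and by the Riesz representation theorem is given by some $\phi\in[N_{\mathsf h}\varphi]$ via $f(x)=\langle\phi,x\varphi\rangle_{\mathbb R}=\langle x\phi,\varphi\rangle_{\mathbb R}$ on $N_{\mathsf h}$. Since the hermitian elements of $N$ separate hermitian forms, this means $f=\re f_{\phi,\varphi}\in{\mathfrak H}_\nu(N)$, contradicting the standing hypothesis. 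Hence both norms are infinite, and the identity \eqref{densend} holds in this case too.

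The main obstacle is precisely the lifting of the estimate from ${\mathfrak A}_{\mathsf h}$ to $N_{\mathsf h}$: it simultaneously requires that the approximants remain hermitian and uniformly bounded (delivered by Kaplansky density), that the vector state side converges (automatic once strong convergence is in hand), and that $f$ itself commutes with the limit (delivered by ultrastrong continuity of $f$). The hypotheses ${\mathfrak A}^{\,\prime\prime}=N$ and $f\in(N_*)_{\mathsf h}$ are used exactly to secure these three ingredients; without either the density of ${\mathfrak A}$ in $N$ or the normality of $f$ the argument would break, and the statement would generally fail.
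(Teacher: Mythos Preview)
Your proof is correct and shares the paper's overall architecture: the finite case is Lemma~\ref{dichte} via Lemma~\ref{neben}, and the infinite case is handled through Kaplansky density together with normality of $f$. The organization of the infinite case differs slightly. The paper argues directly: from $f\notin{\mathfrak H}_\nu(N)$ it extracts orthoprojections $p\in N$ (or a sequence of them) violating any bound $|f(p)|\leq\beta\|p\varphi\|$, approximates these by positive elements of ${\mathfrak A}$ in the strong topology, and reads off $\|f|{\mathfrak A}\|_{\nu|{\mathfrak A}}=\infty$ from the resulting blow-up of $f(x)^2/\nu(x)$. You run the contrapositive instead: assuming $\|f|{\mathfrak A}\|_{\nu|{\mathfrak A}}<\infty$, you transport the bound $|f(x)|\leq C\|x\varphi\|$ from ${\mathfrak A}_{\mathsf h}$ to $N_{\mathsf h}$ and conclude $f\in{\mathfrak H}_\nu(N)$. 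Your version is a little cleaner in that it avoids the paper's case split (one bad projection versus a sequence) and does not single out projections.

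One point deserves a word of justification. You invoke \eqref{fazit} on the ${\mathsf C}^*$-algebra ${\mathfrak A}$, whereas in the paper \eqref{fazit} is established inside the proof of Lemma~\ref{neben}, which is formulated for a $vN$-algebra. This is harmless: the derivation of \eqref{fazit} proceeds by taking $x$ hermitian with finite spectrum, forming the commutative algebra $R$ generated by $x$ and ${\mathsf 1}$, and applying Lemmas~\ref{euk1} and \ref{normformel} to $R$. Since the spectral projections of such $x$ are polynomials in $x$ and hence lie in ${\mathfrak A}$, the algebra $R$ is finite-dimensional and therefore a $vN$-algebra regardless of the ambient setting; the argument then runs verbatim with ${\mathfrak A}$ in place of $N$. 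It would be worth inserting one sentence to this effect.
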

\begin{proof}
Let  $\nu\in {\mathcal S}(N)$ be a vector state over $N$. In case if $\|f\|_\nu<\infty$, Lemma \ref{neben} and Lemma \ref{dichte} together imply \eqref{densend} to be fulfilled. Suppose $f\in (N_*)_{\mathsf h}$ with $\|f\|_\nu=\infty$. In view of Lemma \ref{neben} then $f\not\in {\mathfrak H}_\nu(N)$. Note that the arguments raised at the beginning of the proof of Lemma \ref{neben} are saying that if there were a constant  $\beta\geq 0$ with $|f(p)|\leq \beta \|p\,\varphi\|$, for any orthoprojection $p\in N$, this would imply that $f\in {\mathfrak H}_\nu(N)$. Hence, in the case at hand, in view of Lemma \ref{neben} such a constant cannot exist. In line with this, either there is an orthoprojection $p$ with $\|p\,\varphi\|=0$ and $f(p)\not=0$, or there exists a sequence of orthoprojections $\{p_n\}\subset N$ such that $|f(p_n)|> n  \|p_n\varphi\|$, for any $n\in {{\mathbb{N}}}$. Therefore, by a Kaplansky density argument, we can find nets $\{x_\alpha\}$ or $\{x_{n,\alpha}\}$ of positive operators in the unit ball of ${\mathfrak A}$ and strongly converging to $p$ or $p_n$, respectively. Since both $f$ and $\nu$ are normal linear forms, one then has that either $\lim_\alpha |f(x_\alpha)|^2/\nu(x_\alpha)=\infty$, or
that $\lim_\alpha |f(x_{n,\alpha})|^2/\nu(x_{n,\alpha})> n^2$, for each $n\in {{\mathbb{N}}}$. In either case, from this  more than ever $\|f|{\mathfrak A}\|_{\nu|{\mathfrak A}}=\infty$ follows. Hence,
in case of $\|f\|_\nu=\infty$ \eqref{densend} holds, too.
\end{proof}
\subsection{Characterizing tangent forms in terms of tangent spaces}\label{finaltanallg}
For a unital ${\mathsf C}^*$-algebra $M$ we are ready now to explain structure and basic meaning of the elements of ${\mathsf{T}}_\nu(M)$ as announced in context of Definition \ref{trdef}.
\subsubsection{Tangent forms as elements of tangent spaces}\label{tanallg}
Start with the special case if $M$ is a $vN$-algebra $N$ acting on a Hilbert space ${\mathcal H}$, with scalar product $\langle\cdot,\cdot\rangle$, and where $\nu$ is a vector state. Then things around ${\mathsf{T}}_\nu(M)$ simplify.
 \begin{corolla}\label{vNT}
 Let $N$ be a  $vN$-algebra acting on a Hilbert space ${\mathcal H}$, and let  $\nu$ be a vector state over $N$. For each $\varphi\in {\mathcal S}_N(\nu)$ the following are mutually equivalent:
 \begin{enumerate}
 \item\label{vNT1} $f\in {\mathsf{T}}_\nu(N)$;
 \item\label{vNT2} $f(\cdot)=\langle(\cdot)\hat{\psi}_0,\varphi\rangle+\langle(\cdot)\varphi,\hat{\psi}_0\rangle$, with unique $\hat{\psi}_0\in[N_{\mathsf h}\varphi]$ obeying $\Re\langle\hat{\psi}_0,\varphi\rangle=0$.
 \end{enumerate}
 Moreover, if \eqref{vNT1} is fulfilled, then $\|f\|_\nu=\|\hat{\psi}_0\|$ with the vector  $\hat{\psi}_0$ figuring in \eqref{vNT2}.
 \end{corolla}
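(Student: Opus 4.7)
The plan is to deduce the corollary as a direct consequence of three earlier technical lemmas, namely Lemma \ref{euk1} (which identifies $[N_{\mathsf h}\varphi]$ with the real Euclidean space ${\mathfrak H}_\nu(N)$ via $\iota : \psi \mapsto \Re f_{\psi,\varphi}$), Lemma \ref{normformel} (which gives $\|g\|_\nu = \tfrac{1}{2}|g|_\nu$ on ${\mathfrak H}_\nu(N)$), and Lemma \ref{neben} (which identifies ${\mathfrak T}_\nu(N)$ and ${\mathfrak H}_\nu(N)$). Essentially nothing new has to be constructed; what remains is the bookkeeping of a factor of two and the translation of the side condition $f({\mathsf 1})=0$ into a condition on the representing vector.

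First I would treat the implication \eqref{vNT1} $\Rightarrow$ \eqref{vNT2}. Given $f\in {\mathsf T}_\nu(N)$, by \eqref{tspace} one has $\|f\|_\nu<\infty$, so $f\in {\mathfrak T}_\nu(N)$. Lemma \ref{neben} then places $f$ in ${\mathfrak H}_\nu(N)$, and Lemma \ref{euk1} yields a unique $\psi\in[N_{\mathsf h}\varphi]$ with $f = \Re f_{\psi,\varphi}$. Setting $\hat{\psi}_0 = \psi/2$ and using $\Re f_{\psi,\varphi}(x) = \tfrac{1}{2}\bigl(\langle x\psi,\varphi\rangle + \langle x\varphi,\psi\rangle\bigr)$ produces the required representation $f(x) = \langle x\hat{\psi}_0,\varphi\rangle + \langle x\varphi,\hat{\psi}_0\rangle$. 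Uniqueness of $\hat{\psi}_0$ within $[N_{\mathsf h}\varphi]$ is inherited from the uniqueness in Lemma \ref{euk1}. Evaluating at $x={\mathsf 1}$ gives $f({\mathsf 1}) = 2\Re\langle\hat{\psi}_0,\varphi\rangle$, so the condition $f({\mathsf 1})=0$ from \eqref{tspace} is equivalent to $\Re\langle\hat{\psi}_0,\varphi\rangle = 0$.

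For the converse \eqref{vNT2} $\Rightarrow$ \eqref{vNT1}, starting from the prescribed formula one reads off $f = 2\,\Re f_{\hat{\psi}_0,\varphi}$, so $f\in {\mathfrak H}_\nu(N)$ is hermitian; by Lemma \ref{neben} then $f\in {\mathfrak T}_\nu(N)$, i.e.\ $\|f\|_\nu<\infty$, and the stipulation $\Re\langle\hat{\psi}_0,\varphi\rangle=0$ is precisely $f({\mathsf 1})=0$, verifying $f\in{\mathsf T}_\nu(N)$ by Definition \ref{trdef}. The norm identity follows by combining the factor-of-two rescaling with Lemma \ref{normformel} and the isometry in Lemma \ref{euk1}:
\[
\|f\|_\nu = 2\,\|\Re f_{\hat{\psi}_0,\varphi}\|_\nu = |\Re f_{\hat{\psi}_0,\varphi}|_\nu = \|\hat{\psi}_0\|.
\]

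No genuine obstacle is expected; the only point that needs care is the factor of two, arising because the defining formula in \eqref{vNT2} is $2\Re f_{\hat{\psi}_0,\varphi}$ rather than $\Re f_{\hat{\psi}_0,\varphi}$, and it must be tracked consistently through both the representation (mapping the canonical representing vector $\psi$ of Lemma \ref{euk1} to $\hat{\psi}_0 = \psi/2$) and the norm computation. Everything else is a direct substitution.
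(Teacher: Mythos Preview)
Your proposal is correct and follows essentially the same route as the paper's own proof: both combine Lemma \ref{neben} (to identify ${\mathfrak T}_\nu(N)$ with ${\mathfrak H}_\nu(N)$), Lemma \ref{euk1} (to obtain the unique representing vector in $[N_{\mathsf h}\varphi]$), and Lemma \ref{normformel} (for the factor $\tfrac{1}{2}$ between $|\cdot|_\nu$ and $\|\cdot\|_\nu$), together with the rescaling $\hat{\psi}_0=\psi_0/2$ and the observation that $f({\mathsf 1})=0$ is equivalent to $\Re\langle\hat{\psi}_0,\varphi\rangle=0$. The only difference is organizational: the paper first records the identity ${\mathsf T}_\nu(N)={\mathfrak H}_\nu(N)\cap\{g:g({\mathsf 1})=0\}$ and then argues both directions at once, whereas you spell out \eqref{vNT1}$\Rightarrow$\eqref{vNT2} and \eqref{vNT2}$\Rightarrow$\eqref{vNT1} separately.
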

 \begin{proof}
 Since $N$ is a $vN$-algebra acting on the  Hilbert space ${\mathcal H}$, the Definition \ref{trdef}, when applied
 together with \eqref{traum} and Lemma \ref{neben} in case of $N$, amounts to
 \[
 {\mathsf{T}}_\nu(N)={\mathfrak H}_\nu(N) \cap \{g\in N_{\mathsf h}^*:\,g({\mathsf 1})=0\}
 \]
 By the latter and Lemma \ref{euk1}, \eqref{vNT1} holds if, and only if, $f$ admits a representation as $f(\cdot)=\re \langle(\cdot)\psi_0,\varphi\rangle $, with $\psi_0\in[N_{\mathsf h}\varphi]$ and $\Re\langle\psi_0,\varphi\rangle=0$, which is uniquely determined and is obeying
 $\|\psi_0\|=|f|_\nu$. In defining $\hat{\psi}_0=\psi_0/2$, in view of  Lemma \ref{normformel} from the previous equivalence of \eqref{vNT1} and \eqref{vNT2} follows, with  $\|f\|_\nu=\|\hat{\psi}_0\|$.
 \end{proof}
\begin{theorem}\label{Cstern}
  $M$ be a unital ${\mathsf C}^*$-algebra, $\nu\in {\mathcal S}(M)$. The following are equivalent.
 \begin{enumerate}
 \item\label{Cstern1} $f\in {\mathsf{T}}_\nu(M)$;
 \item\label{Cstern1a}    $f(\cdot)=\langle\pi(\cdot)\psi,\varphi\rangle_\pi+\langle\pi(\cdot)\varphi,\psi\rangle_\pi$, with $\varphi\in {\mathcal S}_{\pi,M}(\nu)$ and $\psi\in{\mathcal H}_\pi$ obeying $\Re\langle\psi,\varphi\rangle_\pi=0$, for some unital $^*$-representation $\{\pi,{\mathcal H}_\pi\}$;
 \item\label{Cstern2}    $f(\cdot)=\langle\pi(\cdot)\hat{\psi}_0,\varphi\rangle_\pi+\langle\pi(\cdot)\varphi,\hat{\psi}_0\rangle_\pi$, with $\varphi\in {\mathcal S}_{\pi,M}(\nu)$ and unique $\hat{\psi}_0\in[\pi(M)_{\mathsf h}\varphi]$ obeying $\Re\langle\hat{\psi}_0,\varphi\rangle_\pi=0$, for some unital $^*$-representation $\{\pi,{\mathcal H}_\pi\}$;
 \item\label{Cstern3} $f\not={\mathsf 0}$ is the tangent form at $\nu$ of a parameterized curve $\gamma\subset {\mathcal S}(M)$ passing through $\nu$ and admitting a $\gamma$-compliant $^*$-representation around there.
 \end{enumerate}
 The vector $\hat{\psi}_0$ figuring in \eqref{Cstern2} is satisfying $\|f\|_\nu=\|\hat{\psi}_0\|_\pi$, and the curve $\gamma$ in \eqref{Cstern3} can be chosen to obey $\|f\|_\nu=\|\phi\|_\pi$ for a  tangent vector $\phi=\phi(\pi,{\mathcal H}_\pi,(\varphi_t))\in {\mathsf{T}}_\nu(M|\gamma)$.
 \end{theorem}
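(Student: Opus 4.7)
The plan is to establish the chain $(3) \Rightarrow (2)$, $(2) \Rightarrow (3)$, $(3) \Rightarrow (1)$ together with the norm formula $\|f\|_\nu = \|\hat\psi_0\|_\pi$, $(1) \Rightarrow (2)$, and finally $(1) \Leftrightarrow (4)$. Throughout I would fix an arbitrary unital $^*$-representation $\{\pi,{\mathcal H}_\pi\}$ with non-trivial $\pi$-fibre of $\nu$ (e.g.\ the GNS representation of $\nu$), pick $\varphi \in {\mathcal S}_{\pi,M}(\nu)$, and write $N = \pi(M)^{\,\prime\prime}$.

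The equivalence $(3) \Leftrightarrow (2)$ essentially rephrases the preparatory analysis around \eqref{decompo}--\eqref{agg3a}. The direction $(3) \Rightarrow (2)$ is trivial with $\psi = \hat\psi_0$. For $(2) \Rightarrow (3)$, I would decompose $\psi = \hat\psi_0 + \hat\psi_1 + p_\pi^{\,\prime}(\varphi)^\perp\psi$ by Lemma \ref{ocomp} and check directly that the contributions of $\hat\psi_1 \in {\mathsf{i}}[p_\pi^{\,\prime}(\varphi)N^{\,\prime}_{\mathsf h}\varphi]$ and of $p_\pi^{\,\prime}(\varphi)^\perp\psi$ to the expression $\langle\pi(x)\psi,\varphi\rangle_\pi + \langle\pi(x)\varphi,\psi\rangle_\pi$ vanish identically in $x \in M$: the first because the hermitian elements of $N^{\,\prime}$ approximating $\hat\psi_1$ commute with $\pi(x)$, and the second because $\pi(x)^*\varphi \in p_\pi^{\,\prime}(\varphi){\mathcal H}_\pi$. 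Uniqueness of $\hat\psi_0 \in [\pi(M)_{\mathsf h}\varphi] = [N_{\mathsf h}\varphi]$ (Kaplansky density) is obtained by testing: for a candidate difference $\delta$, the zero-form condition forces $\Re\langle\pi(a)\delta,\varphi\rangle_\pi = 0$ for every $a = a^* \in M$, which combined with an approximating sequence $\pi(a_n)\varphi \to \delta$ yields $\|\delta\|_\pi^2 = \lim_n\langle\pi(a_n)\delta,\varphi\rangle_\pi \in {\mathsf{i}}{\mathbb R}\cap{\mathbb R} = \{0\}$.

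For $(3) \Rightarrow (1)$ together with the norm formula, I would extend $f$ to the ultrastrongly continuous hermitian form $\tilde f(y) := \langle y\hat\psi_0,\varphi\rangle_\pi + \langle y\varphi,\hat\psi_0\rangle_\pi$ on $N$. Corollary \ref{vNT}, applied to the vector state $\nu_\pi$ on $N$, gives $\tilde f \in {\mathfrak H}_{\nu_\pi}(N)$ with $\|\tilde f\|_{\nu_\pi} = \|\hat\psi_0\|_\pi$; Corollary \ref{haupt} with ${\mathfrak A} = \pi(M)$ then yields $\|\tilde f|\pi(M)\|_{\nu_\pi|\pi(M)} = \|\hat\psi_0\|_\pi$. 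Since $\pi(M_+) = \pi(M)_+$ puts finite positive decompositions of ${\mathsf 1}$ in $M$ and in $\pi(M)$ in bijection (via $\pi$ and any positive lift), the variational formula \eqref{tangentnorm} transfers to give $\|f\|_\nu = \|\tilde f|\pi(M)\|_{\nu_\pi|\pi(M)} = \|\hat\psi_0\|_\pi < \infty$; together with $f({\mathsf 1}) = 2\Re\langle\hat\psi_0,\varphi\rangle_\pi = 0$ this places $f$ in ${\mathsf T}_\nu(M)$.

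The substantial direction is $(1) \Rightarrow (2)$, for which the $C^*$-algebraic analogue of \eqref{fazit} is needed, namely
\begin{equation*}
|f(x)|^2 \leq 4\|f\|_\nu^2\,\nu(x^2), \qquad x \in M_{\mathsf h}.
\end{equation*}
I would prove it via continuous functional calculus intrinsic to $M$: for a fine partition of $\sigma(x)$ with a subordinate continuous partition of unity $\{g_j\} \subset C(\sigma(x))$, the elements $\{g_j(x)\} \subset M_+$ form a finite positive decomposition of ${\mathsf 1}$, and for $\tilde x = \sum_j \lambda_j g_j(x)$ (with $\lambda_j \in \operatorname{supp} g_j$) the Cauchy--Schwarz inequality on finite sums gives $|f(\tilde x)|^2 \leq \bigl(\sum_j \lambda_j^2\,\nu(g_j(x))\bigr)\cdot 4\|f\|_\nu^2$; refining the partition forces $\tilde x \to x$ and $\sum_j \lambda_j^2 g_j(x) \to x^2$ in norm. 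The estimate makes $f$ vanish on $\ker\pi$ (since $x \in \ker\pi \cap M_{\mathsf h}$ has $\nu(x^2) = \|\pi(x)\varphi\|_\pi^2 = 0$), so $\pi(M)_{\mathsf h}\varphi \ni \pi(x)\varphi \mapsto f(x)/2$ is a well-defined bounded real-linear functional on a dense subspace of the Euclidean space $[\pi(M)_{\mathsf h}\varphi]$, and Riesz representation supplies the $\hat\psi_0$ of condition \eqref{Cstern2} with $\Re\langle\hat\psi_0,\varphi\rangle_\pi = 0$ forced by $f({\mathsf 1}) = 0$. Finally, $(4) \Rightarrow (2)$ is contained in Remark \ref{iimpdi}\,\eqref{iimpdi1} and \eqref{iimpd3b}, while $(1) \Rightarrow (4)$ (with $f \neq 0$, so $\|\hat\psi_0\|_\pi = \|f\|_\nu > 0$) is executed by the explicit implementation $\varphi_t = \sqrt{1 - t^2\|\hat\psi_0\|_\pi^2}\,\varphi + t\hat\psi_0$ on a small interval: orthogonality makes $\varphi_t$ a unit vector, the induced map $t \mapsto \nu_t$ is $C^1$ with continuous derivative $f \neq 0$ at $t = 0$, and Lemma \ref{constr} grants injectivity near $0$. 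The main obstacle is the key estimate just described; all other steps merely assemble previously established material, but this one forces us to replace the $vN$-algebraic spectral projections of Lemma \ref{neben} by a continuous-functional-calculus partition of unity living inside $M$.
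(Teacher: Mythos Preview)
Your proof is correct. The overall architecture agrees with the paper's, but your $(1)\Rightarrow(2)$ takes a genuinely different route. The paper chooses $\pi$ so that $\nu$ and both positive parts $f_\pm$ of the canonical decomposition of $f$ are vector-implemented; this guarantees that the extension $f_\pi$ of $f$ to $N=\pi(M)^{\prime\prime}$ is normal, whence Corollary~\ref{haupt} gives $\|f\|_\nu=\|f_\pi\|_{\nu_\pi}<\infty$ and the $vN$-case (Corollary~\ref{vNT}, resting on the spectral-projection argument of Lemma~\ref{neben}) produces $\hat\psi_0$. You instead establish the key bound $|f(x)|^2\le 4\|f\|_\nu^2\,\nu(x^2)$ intrinsically in $M$ via a continuous partition of unity subordinate to $\sigma(x)$ (together with Lemma~\ref{suppnorm} to discard the terms with $\nu(g_j(x))=0$), then apply the Riesz representation in an \emph{arbitrary} representation with non-trivial fibre of $\nu$. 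Your argument is more elementary and $C^*$-intrinsic---no canonical decomposition, no need for $f_\pm$ to be implementable, no extension-and-restriction cycle; the paper's detour buys a uniform use of Corollary~\ref{haupt} in both directions and recycles Lemma~\ref{neben} verbatim rather than re-deriving its content via continuous functional calculus. Your curve $\varphi_t=\sqrt{1-t^2\|\hat\psi_0\|_\pi^2}\,\varphi+t\hat\psi_0$ differs only cosmetically from the paper's $(\varphi+t\hat\psi_0)/\sqrt{1+t^2\|\hat\psi_0\|_\pi^2}$; both have tangent vector $\hat\psi_0$ and invoke Lemma~\ref{constr} in the same way (note that $\hat\psi_0\in[N_{\mathsf h}\varphi]$ forces $\langle\hat\psi_0,\varphi\rangle_\pi$ to be real, so $\Re\langle\hat\psi_0,\varphi\rangle_\pi=0$ indeed gives full orthogonality, which your unit-vector check needs).
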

\begin{proof}
Suppose $f\in {\mathsf{T}}_\nu(M)\backslash\{{\mathsf 0}\}$. Let $f=f_+-f_-$ be the canonical decomposition of $f\in M^*_{\mathsf h}$ as difference of positive linear forms $f_\pm$. Let $\{\pi,{\mathcal H}_\pi\}$ be a unital $^*$-representation of $M$ such that $\nu$, $f_+$ and $f_-$ can be implemented by vectors $\varphi\in {\mathcal S}_{\pi,M}(\nu)$, $\eta_\pm\in {\mathcal S}_{\pi,M}(f_\pm)$. Over $N=\pi(M)^{\,\prime\prime}$, let a vector state $\nu_\pi$ over $N$ and a hermitian linear form $f_\pi\in (N_*)_{\mathsf h}$ be defined by $\nu_\pi(x)=\langle x\varphi,\varphi\rangle_\pi$ and $f_\pi(x)=\langle x\eta_+,\eta_+\rangle_\pi-\langle x\eta_-,\eta_-\rangle_\pi$, respectively, for all  $x\in N$. Then $\nu=\nu_\pi\circ \pi$ and $f=f_\pi\circ\pi$ hold, and therefore in view of Corollary \ref{haupt}, see especially \eqref{densend}, when applied in respect of ${\mathfrak A}=\pi(M)$ and $f_\pi$, $\nu_\pi$ provides that
\begin{equation}\label{idi}
\|f\|_\nu=\|f_\pi|{\pi(M)}\|_{{\nu_\pi}|{\pi(M)}}=\|f_\pi\|_{\nu_\pi}
\end{equation}
Since by assumption $\|f\|_\nu<\infty$ and $f({\mathsf 1})=0$ hold, $f_\pi({\mathsf 1})=0$ and in view of \eqref{idi}  $$\|f_\pi\|_{\nu_\pi}<\infty$$
follows, with
 the vector state $\nu_\pi$ implemented by $\varphi$ over the $vN$-algebra $N=\pi(M)^{\,\prime\prime}$. Thus, $f_\pi\in {\mathsf{T}}_{\nu_\pi}(N)$.  For $N$ acting on ${\mathcal H}_\pi$  Corollary \ref{vNT} can be applied with the result
\[
f_\pi(\cdot)=\langle(\cdot)\hat{\psi}_0,\varphi\rangle_\pi+\langle(\cdot)\varphi,\hat{\psi}_0\rangle_\pi
\]
with unique $\hat{\psi}_0\in[N_{\mathsf h}\varphi]$, and such that
\[
\|f_\pi\|_{\nu_\pi}=\|\hat{\psi}_0\|_\pi
\]
is satisfied. Now, remind that by the  usual density argumentation one has $$[N_{\mathsf h}\varphi]=[\pi(M)^{\,\prime\prime}_{\mathsf h}\varphi]=[\pi(M)_{\mathsf h}\varphi]$$  Hence, $\hat{\psi}_0\in[\pi(M)_{\mathsf h}\varphi]$. But then, owing to \eqref{idi} and since $f=f_\pi\circ\pi$ holds, from the above mentioned  relations the validity of \eqref{Cstern2} follows.
Thus \eqref{Cstern1} with $f\not={\mathsf 0}$ implies \eqref{Cstern2}.
Thereby note that $f\not={\mathsf 0}$ and $f({\mathsf 1})=0$ imply $$\langle \hat{\psi}_0,\varphi\rangle_\pi+\langle \varphi,\hat{\psi}_0\rangle_\pi=0$$ but $\hat{\psi}_0\not=0$. Especially,  $\hat{\psi}_0$ then cannot be a real multiple of $\varphi$. Let a symmetric around zero interval $I\subset {\mathbb R}$ be given with length $|I|<1$. Consider a map $\gamma: I\ni t\longmapsto \nu_t\in {\mathcal S}(M)$ implemented by the simple $C^1$-curve $I\ni t\longmapsto \varphi_t\in {\mathcal S}_{\pi,M}(\nu_t)$, with unit vector
\begin{subequations}\label{gff0}
\begin{equation}\label{gff1}
 \varphi_t=\frac{\varphi+t\,\hat{\psi}_0}{\sqrt{1+t^2\|\hat{\psi}_0\|_\pi^2}}
\end{equation}
The tangent vector reads $\phi=\varphi_t^{\,\prime}|_{t=0}=\hat{\psi}_0$. Thus, by $\nu_t(\cdot)=\langle \pi(\cdot)\varphi_t,\varphi_t\rangle_\pi$ and \eqref{Cstern2}
\begin{equation}\label{fform}
\nu_t^{\,\prime}(\cdot)|_{t=0}=\bigl(\langle\pi(\cdot)\varphi_t^{\,\prime},\varphi_t\rangle_\pi+
\langle\pi(\cdot)\varphi_t,\varphi_t^{\,\prime}\rangle_\pi\bigr){{\bigl|}_{t=0}}=f(\cdot)\not={\mathsf 0}
\end{equation}
is implied over $M$. Also, if $\omega$ is the positive linear form implemented by $\hat{\psi}_0$, $\hat{\psi}_0\in {\mathcal S}_{\pi,M}(\omega)$, then $\nu_t$ can be given according to the following explicit formula
\begin{equation}\label{gff3}
\nu_t=\frac{\nu+tf+t^2\omega}{1+t^2\omega({\mathsf 1})}
\end{equation}
\end{subequations}
Obviously, $t\longmapsto\nu_t$ is continuously differentiable.
In line with \eqref{gff0}, the conditions for an application of Lemma \ref{constr} are given. The conclusion is that
the map $\gamma$ when restricted to some sufficiently small interval around $t=0$ in fact defines a curve of the admissible type which is passing through $\nu$ at parameter value $t=0$ and is relating to  $\{\pi,{\mathcal H}_\pi\}$ as  $\gamma$-compliant $^*$-representation around $\nu$. In addition, according to \eqref{fform} the curve $\gamma$ possesses the tangent form $f$ at $\nu$.
Hence, each $f\in {\mathsf{T}}_\nu(M)\backslash\{{\mathsf 0}\}$ can arise as a tangent form at $\nu$ along some curve of the admissable class. Thereby, the tangent vector of the special curve constructed, $\phi\in {\mathsf{T}}_\nu(M|\gamma)$, is obeying $\phi=\hat{\psi}_0$, and thus in addition  $\|f\|_\nu=\|\phi\|_\pi$ is  fulfilled for the $\gamma$ in question and which is  passing through $\nu$ with tangent form $f$.  Thus, for $f\not={\mathsf 0}$, the net of implications \eqref{Cstern1}$\Rightarrow$\eqref{Cstern2}$\Rightarrow$\eqref{Cstern3}
has to hold, with the conditions  $\|f\|_\nu=\|\hat{\psi}_0\|_\pi$ within \eqref{Cstern2}, and $\|f\|_\nu=\|\phi\|_\pi$ for some $\phi\in {\mathsf{T}}_\nu(M|\gamma)$ within \eqref{Cstern3} fulfilled.

On the other hand, let $\gamma:I\ni t\longmapsto\nu_t\in {\mathcal S}(M)$ be a parameterized curve passing through $\nu$ at $t=0$, with tangent form $f\not={\mathsf 0}$ at $\nu$. Suppose a $\gamma$-compliant $^*$-representation $\{\pi,{\mathcal H}_\pi\}$ around $\nu$ exists. In line with this, let $I_\pi\in t\longmapsto \varphi_t\in {\mathcal S}_{\pi,M}(\nu_t)$ be a differentiable at $t=0$ implementation of $\gamma$ around $\nu$.
Then, with $\varphi_t^{\,\prime}|_{t=0}=\phi=\phi(\pi, {\mathcal H}_\pi,(\varphi_t))\in {\mathsf{T}}_\nu(M|\gamma)$ and $\varphi=\varphi_0$, for the tangent form $f$ at $\nu$ one finds
$
f(\cdot)=\langle \pi(\cdot)\phi,\varphi\rangle_\pi+\langle \pi(\cdot)\varphi,\phi\rangle_\pi
$, with $f({\mathsf{1}})=0$ fulfilled.
Thus, the hermitian linear form $f_\pi(\cdot)=\re\langle (\cdot)2\,\phi,\varphi\rangle_\pi$ on $N=\pi(M)^{\,\prime\prime}$ is obeying $f_\pi\circ\pi=f$ and $f_\pi\in {\mathfrak H}_{\nu_\pi}(N)$, with the vector state $\nu_\pi$ implemented by $\varphi$ on $N$. Obviously  $f_\pi\in (N_*)_{\mathsf h}$ holds, with $f_\pi({\mathsf 1})=0$ and $\|f_\pi\|_{\nu_\pi}<\infty$, due to \eqref{euk0} and Lemma \ref{normformel}. With ${\mathfrak A}=\pi(M)$, Corollary \ref{haupt} and  \eqref{densend} can be applied and yield
$$
\|f_\pi|{\pi(M)}\|_{\nu_\pi|{\pi(M)}}=\|f_\pi\|_{\nu_\pi}
$$
According to $\|f_\pi\|_{\nu_\pi}<\infty$ from the previous and since always
\[
\|f\|_\nu=\|f_\pi|{\pi(M)}\|_{\nu_\pi|{\pi(M)}}
\]
has to be fulfilled, we see that the hermitian linear form $f$ has to satisfy $\|f\|_\nu<\infty$, $f({\mathsf 1})=0$, and thus in view of Definition \ref{trdef} $f\in {\mathsf{T}}_\nu(M)$ follows. Hence, in case of $f\not={\mathsf 0}$, the implication \eqref{Cstern3}$\Rightarrow$\eqref{Cstern1} is seen to hold. Together with the above in case of $f\not={\mathsf 0}$ we thus infer that the net of implications  \eqref{Cstern1}$\Rightarrow$\eqref{Cstern2}$\Rightarrow$\eqref{Cstern3}$\Rightarrow$\eqref{Cstern1} takes place, and therefore in this case we have equivalence among \eqref{Cstern1}, \eqref{Cstern2} and \eqref{Cstern3}. The implication \eqref{Cstern2}$\Rightarrow$\eqref{Cstern1a} is trivially valid. To see the other way around, suppose \eqref{Cstern1a}. Then,
$$
f_\pi=\re \langle(\cdot) 2\psi,\varphi\rangle_\pi
$$
over the $vN$-algebra $N=\pi(M)^{\,\prime\prime}$. In accordance with Lemma \ref{euk1}, see \eqref{euk.3}, we may consider the linear isometry
$\iota_\pi: [N_{\mathsf h}\varphi]\rightarrow {\mathfrak H}_{\nu_\pi}(N)$
and in defining
\[
\hat{\psi}_0=\frac{1}{2}\,\iota_\pi^{-1}(f_\pi)\in [N_{\mathsf h}\varphi]
\]
we can be assured that $\hat{\psi}_0$ is the only vector in $[N_{\mathsf h}\varphi]$ obeying over $N$
\[
f_\pi(\cdot)=\langle(\cdot)\hat{\psi}_0,\varphi\rangle_\pi+\langle(\cdot)\varphi,\hat{\psi}_0\rangle_\pi
\]
In view of Lemma \ref{normformel} then  $\|f_\pi\|_{\nu_\pi}=\|\hat{\psi}_0\|_\pi$ is fulfilled.  Owing to $f_\pi \circ \pi=f$ over $M$ and since \eqref{idi} holds, from the previous \eqref{Cstern2} follows.  That is, the implication \eqref{Cstern1a}$\Rightarrow$\eqref{Cstern2} is seen to hold, too. Thus, by  the above proven, for $f\not={\mathsf 0}$, \eqref{Cstern1}--\eqref{Cstern3} are mutually equivalent. For $f={\mathsf 0}$ equivalence among \eqref{Cstern1}--\eqref{Cstern2} is trivially valid. This completes the proof.
\end{proof}
With a unital $^*$-representation $\{\pi,{\mathcal H}_\pi\}$ consider the subset of states given as
\begin{equation}\label{pimann1}
{\mathcal{S}}^\pi(M)=\bigl\{\varrho\in {\mathcal S}(M): {\mathcal S}_{\pi,M}(\varrho)\not=\emptyset\bigr\}
\end{equation}
That is, ${\mathcal{S}}^\pi(M)$ is  the set of all states arising as vector states relative to $\{\pi,{\mathcal H}_\pi\}$. With this setting in mind,
a useful supplement to Theorem \ref{Cstern}  will be added.
\begin{corolla}\label{tanfolg}
Let $\nu\in {\mathcal S}(M)$, $f\in {\mathsf{T}}_\nu(M)$, and be   $\{\pi_0,{\mathcal H}_{\pi_0}\}$ a   unital $^*$-representation of $M$ with non-trivial $\pi_0$-fibre of $\nu$. Then, the following facts hold.
\begin{enumerate}
\item \label{tanfolg01}
For given $\varphi_0\in {\mathcal S}_{\pi_0,M}(\nu)$, over $M$ we have
\begin{equation}\label{tanfolg0}
f(\cdot)=\langle\pi_0(\cdot)\hat{\xi}_0,\varphi_0\rangle_{\pi_0}+\langle\pi_0(\cdot)\varphi_0,\hat{\xi}_0\rangle_{\pi_0}
\end{equation}
with a unique $\hat{\xi}_0\in[\pi_0(M)_{\mathsf h}\varphi_0]$ obeying  $\|f\|_\nu=\|\hat{\xi}_0\|_{\pi_0}$.
\item \label{tanfolg02}
Define $\omega(\cdot)=\langle \pi_0(\cdot) \hat{\xi}_0,\hat{\xi}_0\rangle_{\pi_0}$ over $M$. For $f\not={\mathsf 0}$, the parameterized curve
\begin{equation}\label{tanfolg0b}
\gamma:\,[-1,1]\ni t\longmapsto \nu_t=\frac{\nu+t f + t^2 \omega}{1+t^2\|f\|_\nu^2}
\end{equation}
is passing through $\nu$ and is exhibiting $f$ as tangent form at $\nu$. Moreover, $\gamma\subset {\mathcal{S}}^\pi(M)$ holds, for  each unital $^*$-representation $\{\pi,{\mathcal H}_\pi\}$ such that $\nu\in {\mathcal{S}}^\pi(M)$. Thus, each  $\{\pi,{\mathcal H}_\pi\}$ obeying  ${\mathcal S}_{\pi,M}(\nu)\not=\emptyset$ is $\gamma$-compliant around $\nu$.
\end{enumerate}
\end{corolla}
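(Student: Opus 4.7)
The plan is to derive both assertions from the uniqueness part of Theorem \ref{Cstern}, realised in the given representation $\{\pi_0,{\mathcal H}_{\pi_0}\}$. For (i), I would lift $f$ to the hermitian normal form $f_{\pi_0}$ on $N_0=\pi_0(M)^{\,\prime\prime}$ defined by $f=f_{\pi_0}\circ\pi_0$; Corollary \ref{haupt} gives $\|f_{\pi_0}\|_{\nu_{\pi_0}}=\|f\|_\nu<\infty$, so by Lemma \ref{neben} one has $f_{\pi_0}\in\mathfrak{H}_{\nu_{\pi_0}}(N_0)$. The isometry \eqref{euk.3} of Lemma \ref{euk1} combined with Lemma \ref{normformel} then produces a unique vector $\hat{\xi}_0\in[N_{0,{\mathsf h}}\varphi_0]=[\pi_0(M)_{\mathsf h}\varphi_0]$ representing $f_{\pi_0}$ as in \eqref{tanfolg0}, with $\|\hat{\xi}_0\|_{\pi_0}=\|f\|_\nu$.

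For the constructive half of (ii), I would set $\varphi_t=\varphi_0+t\hat{\xi}_0$. Since $f({\mathsf 1})=0$ forces $\re\langle\hat{\xi}_0,\varphi_0\rangle_{\pi_0}=0$, one finds $\|\varphi_t\|_{\pi_0}^2=1+t^2\|f\|_\nu^2$, while expansion of $\langle\pi_0(x)\varphi_t,\varphi_t\rangle_{\pi_0}$ equals $\nu(x)+tf(x)+t^2\omega(x)$ term by term. Hence $\tilde{\varphi}_t=\varphi_t/\|\varphi_t\|_{\pi_0}$ implements $\nu_t$ of \eqref{tanfolg0b} for every $t\in[-1,1]$, so $\nu_t\in{\mathcal S}^{\pi_0}(M)$, the map $t\mapsto\nu_t$ is smooth, and its tangent form at $t=0$ equals $f$. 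As $f\neq{\mathsf 0}$, Lemma \ref{constr} secures local injectivity around $\nu$, so $\gamma$ is a parameterized curve through $\nu$ with tangent form $f$.

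The substantive remaining claim is that $\gamma\subset{\mathcal S}^\pi(M)$ for \emph{every} unital $^*$-representation $\pi$ with $\nu\in{\mathcal S}^\pi(M)$, and my strategy is to show that the positive linear form $\omega$ depends only on $(\nu,f)$ and not on the representation used to build it. Given such $\pi$ and $\varphi\in{\mathcal S}_{\pi,M}(\nu)$, part (i) applied in $\pi$ yields a unique $\hat{\psi}_0^\pi\in[\pi(M)_{\mathsf h}\varphi]$ with $\|\hat{\psi}_0^\pi\|_\pi=\|f\|_\nu$; define $\omega_{\pi,\varphi}(x)=\langle\pi(x)\hat{\psi}_0^\pi,\hat{\psi}_0^\pi\rangle_\pi$. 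Independence of $\omega_{\pi,\varphi}$ from the choice of $\varphi$ follows from \eqref{bas4}: if $\varphi'=v\varphi$ with $v\in\pi(M)^{\,\prime}$, $v^*v=p_\pi^{\,\prime}(\varphi)$, then $v\hat{\psi}_0^\pi$ lies in $[\pi(M)_{\mathsf h}\varphi']$ and represents $f$ via $\varphi'$, so uniqueness in (i) forces $\hat{\psi}_0^{\pi,\varphi'}=v\hat{\psi}_0^\pi$, and a direct computation using $\hat{\psi}_0^\pi\in p_\pi^{\,\prime}(\varphi){\mathcal H}_\pi$ yields $\omega_{\pi,\varphi'}=\omega_{\pi,\varphi}$. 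For independence from $\pi$, I would pass to the direct sum $\tilde{\pi}=\pi\oplus\pi_0$: both $(\varphi,0)$ and $(0,\varphi_0)$ implement $\nu$ via $\tilde{\pi}$, and $[\tilde{\pi}(M)_{\mathsf h}(\varphi,0)]=[\pi(M)_{\mathsf h}\varphi]\oplus\{0\}$ forces the unique vector from (i) at $(\varphi,0)$ to be $(\hat{\psi}_0^\pi,0)$, yielding $\omega_{\tilde{\pi},(\varphi,0)}=\omega_{\pi,\varphi}$; symmetrically $\omega_{\tilde{\pi},(0,\varphi_0)}=\omega$, so the vector-independence applied in $\tilde{\pi}$ equates these two, giving $\omega_{\pi,\varphi}=\omega$.

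With $\omega_{\pi,\varphi}=\omega$ in hand, the vectors $\tilde{\varphi}_t^\pi=(\varphi+t\hat{\psi}_0^\pi)/\sqrt{1+t^2\|f\|_\nu^2}$ implement $\nu_t$ via $\pi$ by the same expansion as in the second paragraph, so $\gamma\subset{\mathcal S}^\pi(M)$; the smoothness of $t\mapsto\tilde{\varphi}_t^\pi$ simultaneously makes $\{\pi,{\mathcal H}_\pi\}$ $\gamma$-compliant around $\nu$. The principal obstacle is the representation-independence of $\omega$: the rest of the argument is direct calculation, but this step is what turns (i), which a priori depends on the chosen $\pi_0$, into a genuinely intrinsic statement about $\mathsf{T}_\nu(M)$, and it hinges on the uniqueness clause of (i) in an essential way.
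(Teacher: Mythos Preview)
Your argument for (ii)—proving that $\omega$ is representation-independent via the direct sum $\pi\oplus\pi_0$ combined with fibre-independence inside a fixed representation—is correct and rather elegant, once (i) holds for every admissible $\pi_0$.

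The gap is in (i). You begin by lifting $f$ to a hermitian \emph{normal} form $f_{\pi_0}$ on $N_0=\pi_0(M)^{\prime\prime}$ and then invoke Corollary~\ref{haupt} and Lemma~\ref{neben}. But Corollary~\ref{haupt} takes $f_{\pi_0}\in(N_{0*})_{\mathsf h}$ as a \emph{hypothesis}, and normality of the lift is precisely what (i) would deliver: once $\hat\xi_0$ is in hand, $\langle(\cdot)\hat\xi_0,\varphi_0\rangle_{\pi_0}+\langle(\cdot)\varphi_0,\hat\xi_0\rangle_{\pi_0}$ \emph{is} the normal extension to $N_0$. A priori you only know $f_{\pi_0}$ is well-defined on $\pi_0(M)$ (via Lemma~\ref{suppnorm} and the ideal structure of $\ker\pi_0$), not that it extends normally to the weak closure, so the step is circular. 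The paper avoids this by transport: Theorem~\ref{Cstern}\,\eqref{Cstern2} produces $\hat\psi_0\in[\pi(M)_{\mathsf h}\varphi]$ in \emph{some} $\pi$ (chosen so that $\nu$ and $f_\pm$ are all vector-implemented), and the canonical GNS intertwiner $U\colon\pi(x)\varphi\mapsto\pi_0(x)\varphi_0$ carries it to $\hat\xi_0=U\hat\psi_0\in[\pi_0(M)_{\mathsf h}\varphi_0]$. The same $U$, applied with approximants $\hat\psi_0=\lim_n\pi(x_n)\varphi$, gives $\langle\pi(\cdot)\hat\psi_0,\hat\psi_0\rangle_\pi=\langle\pi_0(\cdot)\hat\xi_0,\hat\xi_0\rangle_{\pi_0}$, so (i) and the $\omega$-invariance are handled in one stroke, making your direct-sum device unnecessary (though still valid once (i) is secured). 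A direct repair of your route also exists: the bound $|f(x)|\le 2\|f\|_\nu\sqrt{\nu(x^2)}=2\|f\|_\nu\|\pi_0(x)\varphi_0\|_{\pi_0}$ for $x\in M_{\mathsf h}$ follows from $\|f\|_\nu<\infty$ by approximating with finite-spectrum elements (whose spectral projections are polynomials in $x$, hence lie in $M$), and then Riesz on $[\pi_0(M)_{\mathsf h}\varphi_0]$ yields $\hat\xi_0$ without ever passing through $N_0$.
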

\begin{proof}
There are $\{\pi,{\mathcal H}_\pi\}$, $\varphi\in {\mathcal S}_{\pi,M}(\nu)$ and $\hat{\psi}_0\in [\pi(M)_{\mathsf h}\varphi]$ with respect to which \eqref{Cstern2} of Theorem \ref{Cstern} holds. Since $\varphi_0\in {\mathcal S}_{\pi_0,M}(\nu)$ holds,  for each $x\in M$ we have $$\nu(x^* x)=\|\pi(x)\varphi\|_\pi^2=\|\pi_0(x)\varphi_0\|_{\pi_0}^2$$ Hence, the map defined by $ \pi(x)\varphi\longmapsto \pi_0(x)\varphi_0$ for each $x\in M$ uniquely extends to an isometry $U$ acting from  $p_\pi^{\,\prime}(\varphi){\mathcal H}_\pi=[\pi(M)\varphi]$ onto $p_{\pi_0}^{\,\prime}(\varphi_0){\mathcal H}_{\pi_0}=[\pi_0(M)\varphi_0]$.
Let $\hat{\xi}_0=U\hat{\psi}_0$. Owing to $\hat{\psi}_0\in [\pi(M)_{\mathsf h}\varphi]$ we have  $\hat{\xi}_0\in [\pi_0(M)_{\mathsf h}\varphi_0]$. From this together with $U\varphi=\varphi_0$ and the fact that $U$ is an isometry we infer that, for each $y\in M$,
\[
\langle \pi_0(y)\varphi_0,\hat{\xi}_0\rangle_{\pi_0}=\langle U\pi(y)\varphi,U\hat{\psi}_0\rangle_{\pi_0}=\langle \pi(y)\varphi,\hat{\psi}_0\rangle_\pi
\]
holds. From the representation \eqref{Cstern2} of Theorem \ref{Cstern} we then conclude that
\begin{eqnarray*}
f(x)&=&\langle\pi(x)\hat{\psi}_0,\varphi\rangle_\pi+\langle\pi(x)\varphi,\hat{\psi}_0\rangle_\pi\\
&=& \overline{\langle\pi(x^*)\varphi,\hat{\psi}_0\rangle_\pi} + \langle\pi(x)\varphi,\hat{\psi}_0\rangle_\pi\\
&=& \overline{\langle\pi_0(x^*)\varphi_0,\hat{\xi}_0\rangle_{\pi_0}} + \langle\pi_0(x)\varphi_0,\hat{\xi}_0\rangle_{\pi_0}\\
&=& \langle\pi_0(x)\hat{\xi}_0,\varphi_0\rangle_{\pi_0}+\langle\pi_0(x)\varphi_0,\hat{\xi}_0\rangle_{\pi_0}
\end{eqnarray*}
for each $x\in M$. This is  formula \eqref{tanfolg0}.  Uniqueness of $\hat{\xi}_0\in [\pi_0(M)_{\mathsf h}\varphi_0]$ with $\|f\|_\nu=\|\hat{\xi}_0\|_{\pi_0}$ follows from the uniqueness of $\hat{\psi}_0\in [\pi(M)_{\mathsf h}\varphi]$ with $\|f\|_\nu=\|\hat{\psi}_0\|_{\pi_0}$ in Theorem \ref{Cstern} \eqref{Cstern2} and since $U$ is an isometry. This proves \eqref{tanfolg01}.
Also remark that if $\{x_n\}\subset M_{\mathsf{h}}$ is chosen such that $\hat{\psi}_0 = \lim_{n\to \infty} \pi(x_n)\varphi$, then  $
\hat{\xi}_0=U\hat{\psi}_0=\lim_{n\to \infty}\pi_0(x_n)\varphi_0
$ follows. Hence, for each $x\in M$, we conclude as follows:
\begin{eqnarray*}
\langle\pi(x)\hat{\psi}_0,\hat{\psi}_0\rangle_{\pi} &=& \lim_{n\to \infty} \langle\pi(x x_n)\varphi,\pi(x_n)\varphi\rangle_{\pi}=\lim_{n\to \infty} \langle U\pi(x x_n)\varphi,U\pi(x_n)\varphi\rangle_{\pi_0}=\\
&=&\lim_{n\to \infty} \langle \pi_0(x x_n)\varphi_0,\pi_0(x_n)\varphi_0\rangle_{\pi_0}=\langle \pi_0(x) \hat{\xi}_0,\hat{\xi}_0\rangle_{\pi_0}
\end{eqnarray*}
Thus, the positive linear form $\omega$ implemented by $\hat{\xi}_0$ in respect of $\pi_0$ is the same as the positive linear form implemented by $\hat{\psi}_0$ in respect of $\pi$. The special construction \eqref{gff0} used in proving Theorem \ref{Cstern} \eqref{Cstern2} $\Rightarrow$ \eqref{Cstern3} with input data $\pi$, $\varphi$ and $\hat{\psi}_0$, equally well can be applied when starting from the input data $\pi_0$, $\varphi_0$ and $\hat{\xi}_0$.
When inserting  these input data into formula \eqref{gff1}, the facts mentioned about $f, \omega$ imply that both the families  $$[-1,1]\ni t\longmapsto (\varphi+t\hat{\psi}_0)/\biggl(\sqrt{1+t^2\|f\|_\nu^2}\biggr),\ [-1,1]\ni t\longmapsto (\varphi_0+t\hat{\xi}_0)/\biggl(\sqrt{1+t^2\|f\|_\nu^2}\biggr)$$
when considered  relative to $\pi$ and $\pi_0$, respectively, pointwise are implementations of the same states $\nu_t$. That is, the resulting curve $\gamma$ then will be the same in both cases, and is given by \eqref{tanfolg0b}. It is passing through $\nu$ at $t=0$, and is exhibiting $f$ as tangent form at $\nu$. In addition, relating \eqref{pimann1}, we have $\gamma\in {\mathcal S}^\pi(M)\cap {\mathcal S}^{\pi_0}(M)$. Thus  especially both $\{\pi,{\mathcal H}_\pi\}$ and   $\{\pi_0,{\mathcal H}_{\pi_0}\}$ are  $\gamma$-compliant. Finally, since $\pi_0$ can stand for any unital $^*$-representation $\pi$ with non-trivial $\pi$-fibre of  $\nu$, the proof of \eqref{tanfolg02} is complete now.
\end{proof}
\begin{remark}\label{bestapprox}
\begin{enumerate}
\item\label{bestapprox1}
Usually, $f\in {\mathsf{T}}_\nu(M)$ will be given as in Theorem \ref{Cstern}\,\eqref{Cstern1a} by
\begin{subequations}\label{allgf}
\begin{equation}\label{allgf1}
f(\cdot)=\langle\pi(\cdot)\psi,\varphi\rangle_\pi+\langle\pi(\cdot)\varphi,\psi\rangle_\pi
\end{equation}
with the help of $\varphi\in {\mathcal S}_{\pi,M}(\nu)$ and some tangent vector
$\psi=\psi(\pi,{\mathcal H}_\pi,(\varphi_t)) \in {\mathsf{T}}_\nu(M|\gamma)$ referring to some implementation of a curve $\gamma$ and $\gamma$-compliant representation $\{\pi,{\mathcal H}_\pi\}$. And $\hat{\psi}_0$ arises from $\psi$ by the following procedure:
\begin{equation}\label{allgf2}
\hat{\psi}_0=\mathfrak{P}_\pi \psi
\end{equation}
where $\mathfrak{P}_\pi$ is the linear projection operator over ${\mathcal H}_{\pi,{\mathbb R}}$ projecting onto the real linear subspace $[\pi(M)_{\mathsf h}\varphi]$. In fact, since for   $\psi$  and $\hat{\psi}_0$ in respect of $f$,  \eqref{allgf1} and Theorem \ref{Cstern}\,\eqref{Cstern2}  are satisfied, for each $x\in \pi(M)_{\mathsf h}$ especially
\begin{equation}\label{allgf3}
0=\langle(\psi-\hat{\psi}_0),\pi(x)\varphi\rangle_{\pi,{\mathbb R}}
\end{equation}
\end{subequations}
is seen to hold. That is, $\psi-\hat{\psi}_0\in [\pi(M)_{\mathsf h}\varphi]^\perp$ holds, with respect to the inner product $\langle\cdot,\cdot\rangle_{\pi,{\mathbb R}}$. This together with $\hat{\psi}_0\in [\pi(M)_{\mathsf h}\varphi]$ implies \eqref{allgf2}.
\item\label{bestapprox2}
By Euklidean calculus,  \eqref{allgf3} together with  $\hat{\psi}_0\in [\pi(M)_{\mathsf h}\varphi]$ means that $\hat{\psi}_0$  is the unique vector in $[\pi(M)_{\mathsf h}\varphi]$ minimizing the distance to $\psi$,
\begin{equation}\label{allgf4}
\|\psi-\hat{\psi}_0\|_\pi=\inf_{\xi\in[\pi(M)_{\mathsf h}\varphi]} \|\psi-\xi\|_\pi
\end{equation}
Thus, $\hat{\psi}_0$ is the best $\langle\cdot,\cdot\rangle_{\pi,\mathbb R}$-approximation of $\psi$ within  $[\pi(M)_{\mathsf h}\varphi]\subset {\mathcal H}_{\pi,{\mathbb R}}$.  This interpretation of $\hat{\psi}_0$ yet has been mentioned in Remark \ref{iimpdi}\,\eqref{iimpdi3}, Lemma \ref{ocomp} and Remark \ref{agg}\,\eqref{agg3} in context of the decomposition \eqref{decompo} of $\psi\in {\mathcal H}_\pi$.
\item\label{bestapprox3}
Let $f\in {\mathsf{T}}_\nu(M)\backslash\{{\mathsf 0}\}$, and be  $\{\pi,{\mathcal H}_\pi\}$ a unital $^*$-representation with non-trivial $\pi$-fibre at $\nu$. Then, according to Corollary \ref{tanfolg} and  Theorem \ref{Cstern}, $\{\pi,{\mathcal H}_\pi\}$ appears as a $\gamma$-compliant $^*$-representation around $\nu$ for  some  parameterized curve $\gamma$ passing through $\nu$ and which has $f$ as  tangent form at $\nu$. Accordingly, each unital $^*$-representation with non-trivial fibre at $\nu$ is equally well  qualified as a carrier of a sufficient large set of implementable curves passing through $\nu$ and thereby displaying at $\nu$ the full range  ${\mathsf{T}}_\nu(M)\backslash\{{\mathsf 0}\}$ of tangent forms that can appear there. An example for such a set of curves is given by \eqref{tanfolg0b}.
\item\label{bestapprox4}
Note that  \eqref{bestapprox3}  does not provide any answer to the question whether or not a given unital $^*$-representation with non-trivial fibre at $\nu$ for given individual implementable curve $\gamma$ passing through $\nu$ can be $\gamma$-compliant around there. In yet sharpened form one may wonder whether there might exist some  unital $^*$-representation with non-trivial fibre at $\nu$ and which were $\gamma$-compliant around there, for any implementable curve $\gamma$ passing through $\nu$ and belonging to some well-defined class. Thus e.g., if  $M$ is a finite factor of type $I$ and the class of all implementable curves of states of full support and passing through $\nu$ is considered, the latter question can be answered in the  affirmative, see Theorem \ref{enddiffex} and Example \ref{gegendiff}. From this to other cases can be extrapolated.
\end{enumerate}
\end{remark}
\subsubsection{Further properties of the tangent space}\label{proptannorm}
Add some useful information about the structure of the tangent space ${\mathsf{T}}_\nu(M)$, for a state $\nu\in {\mathcal S }(M)$. In particular, we ask for conditions assuring ${\mathsf{T}}_\nu(M)$ to be non-trivial.
Start with some useful auxiliary results in context of Definition \ref{trdef} and formula \eqref{tangentnorm}.
\begin{lemma}\label{suppnorm}
Let $f\in {\mathsf{T}}_\nu(M)$, $a\in M$ with ${\mathsf 0}\leq a \leq {\mathsf 1}$. Then, the following hold.
\begin{enumerate}
\item \label{suppnorma}
$f(a)\not=0\Rightarrow\,\nu(a)\not=0$;
\item \label{suppnormb}
$\,\nu(a)=0\Rightarrow f(a)=0$.
\end{enumerate}
\end{lemma}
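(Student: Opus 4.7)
The two implications (a) and (b) are logically equivalent, being contrapositives of each other, so it suffices to establish (b). My plan is to argue by contradiction: assume $\nu(a)=0$ while $f(a)\neq 0$, and construct a family of finite positive decompositions of the unity of $M$ along which the sum in \eqref{tangentnorm} diverges, contradicting $\|f\|_\nu<\infty$ (which holds since $f\in \mathsf{T}_\nu(M)$, by Definition \ref{trdef}).

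The specific construction I have in mind uses the two-element decomposition $\{x_1^\varepsilon,x_2^\varepsilon\}\subset M_+$ parameterized by $\varepsilon\in (0,1)$, with
\[
 x_1^\varepsilon=(1-\varepsilon)\,a+\varepsilon\,{\mathsf 1},\qquad x_2^\varepsilon=(1-\varepsilon)({\mathsf 1}-a).
\]
Both operators are manifestly in $M_+$ (using $0\leq a\leq {\mathsf 1}$), and they sum to ${\mathsf 1}$. A direct evaluation, taking into account that $\nu(a)=0$ (hence $\nu({\mathsf 1}-a)=1$) and that $f({\mathsf 1})=0$ (hence $f({\mathsf 1}-a)=-f(a)$), gives
\[
\nu(x_1^\varepsilon)=\varepsilon,\quad \nu(x_2^\varepsilon)=1-\varepsilon,\quad f(x_1^\varepsilon)=(1-\varepsilon)f(a),\quad f(x_2^\varepsilon)=-(1-\varepsilon)f(a).
\]
Both denominators are non-zero, so both terms enter the primed sum, and one obtains
\[
{\sum_{j}}^\prime \frac{f(x_j^\varepsilon)^2}{\nu(x_j^\varepsilon)}=f(a)^2\left(\frac{(1-\varepsilon)^2}{\varepsilon}+(1-\varepsilon)\right).
\]
Under the hypothesis $f(a)\neq 0$ this tends to $+\infty$ as $\varepsilon\downarrow 0$, so by \eqref{tangentnorm} the supremum defining $\|f\|_\nu$ must be $+\infty$. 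This contradicts $f\in {\mathsf{T}}_\nu(M)$.

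There is no real obstacle here; the only subtlety is that the primed summation skips terms with vanishing $\nu(x_j)$, so one cannot directly put $x_1=a$, $x_2={\mathsf 1}-a$ and blame the $a$-term. The $\varepsilon$-perturbation is precisely the device that keeps both denominators positive while letting the ratio $f(x_1^\varepsilon)^2/\nu(x_1^\varepsilon)$ blow up, thereby forcing the supremum to be infinite. Having established (b), the implication (a) follows immediately by contraposition.
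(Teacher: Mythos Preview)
Your proof is correct and follows essentially the same approach as the paper: both use the $\varepsilon$-perturbation $a_\varepsilon=(1-\varepsilon)a+\varepsilon\,{\mathsf 1}$ (your $x_1^\varepsilon$) to keep the denominator positive while forcing the quotient $f(a_\varepsilon)^2/\nu(a_\varepsilon)$ to blow up. The only cosmetic differences are that the paper proves (a) directly rather than (b) by contradiction, and uses just the single $a_\varepsilon$-term in the sum (which already suffices) rather than computing both terms.
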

\begin{proof}
Note that the assertion of \eqref{suppnormb} is the contraposition of \eqref{suppnorma}. Thus, the proof will be complete if \eqref{suppnorma} is shown to hold. We are going to prove the latter now.
For $1>\varepsilon>0$, let $a_\varepsilon=(1-\varepsilon)a+\varepsilon {\mathsf 1}$. Then,  ${\mathsf 0}\leq a_\varepsilon \leq {\mathsf 1}$, and since especially $f({\mathsf 1})=0$ holds,  $f(a_\varepsilon)=(1-\varepsilon) f(a)\not=0$ follows. Thus, by Definition \ref{e.1} we then see that
\[
\infty >\|f\|^2_\nu\geq \frac{1}{4}\,\frac{f(a_\varepsilon)^2}{\nu(a_\varepsilon)}=\frac{1}{4}\,\frac{(1-\varepsilon)^2 }{(1-\varepsilon)\nu(a)+\varepsilon }\,f(a)^2
\]
has to be fulfilled for any $\varepsilon\in ]0,1[$. Especially, due to $f(a)\not=0$ we conclude that the $\varepsilon$-depending factor (which is positive) for $\varepsilon \to 0$ has to remain bounded from above. This only can be possible provided  $\nu(a)\not=0$ is fulfilled.
\end{proof}
\begin{corolla}\label{normabsch}
$\|f\|_1\leq 2\,\|f\|_\nu,\,\forall f\in {\mathsf{T}}_\nu(M)$.
\end{corolla}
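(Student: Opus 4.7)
The plan is to derive the bound directly from the representation of tangent forms obtained in Theorem \ref{Cstern}. Given $f\in{\mathsf{T}}_\nu(M)$, I will apply Theorem \ref{Cstern}\,\eqref{Cstern1}$\Rightarrow$\eqref{Cstern2} (or equivalently invoke Corollary \ref{tanfolg}\,\eqref{tanfolg01} for any fixed convenient $^*$-representation) to write
\[
f(x)=\langle\pi(x)\hat{\psi}_0,\varphi\rangle_\pi+\langle\pi(x)\varphi,\hat{\psi}_0\rangle_\pi
\]
for some unital $^*$-representation $\{\pi,{\mathcal H}_\pi\}$, some $\varphi\in {\mathcal S}_{\pi,M}(\nu)$, and a uniquely determined $\hat{\psi}_0\in [\pi(M)_{\mathsf h}\varphi]$ satisfying $\|\hat{\psi}_0\|_\pi=\|f\|_\nu$.

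Next I will estimate $|f(x)|$ for $x\in M$ with $\|x\|\leq 1$. Since $\pi$ is a $^*$-homomorphism one has $\|\pi(x)\|\leq \|x\|\leq 1$, and since $\varphi\in {\mathcal S}_{\pi,M}(\nu)$ one has $\|\varphi\|_\pi=1$. By the Cauchy--Schwarz inequality applied to each of the two summands,
\[
|f(x)|\leq \|\pi(x)\hat{\psi}_0\|_\pi\,\|\varphi\|_\pi+\|\pi(x)\varphi\|_\pi\,\|\hat{\psi}_0\|_\pi\leq 2\,\|\hat{\psi}_0\|_\pi=2\,\|f\|_\nu\,.
\]
Passing to the supremum over the unit ball of $M$ yields $\|f\|_1\leq 2\,\|f\|_\nu$.

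There is no real obstacle here: the content of the assertion has already been packaged into the representation formula of Theorem \ref{Cstern}, together with the identification $\|f\|_\nu=\|\hat{\psi}_0\|_\pi$. (Alternatively, one could first pass to the ${\mathsf W}^*$-setting via $f=f_\pi\circ \pi$ with $f_\pi\in {\mathfrak H}_{\nu_\pi}(N)$, invoke \eqref{euk.4} together with Lemma \ref{normformel} to obtain $\|f_\pi\|_1\leq 2\,\|f_\pi\|_{\nu_\pi}\sqrt{\nu_\pi({\mathsf 1})}=2\,\|f_\pi\|_{\nu_\pi}$, and then apply Corollary \ref{haupt} and Lemma \ref{eqdist}\,\eqref{eqdist1} to transfer the estimate back to $M$; but the direct Cauchy--Schwarz argument is shorter and requires no additional lemmas.)
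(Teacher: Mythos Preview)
Your proof is correct. The representation formula from Theorem \ref{Cstern}\,\eqref{Cstern2} together with $\|\hat{\psi}_0\|_\pi=\|f\|_\nu$ is available at this point in the paper, and the Cauchy--Schwarz estimate on each summand gives the bound immediately.

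The paper, however, argues differently and more elementarily, working directly from Definition \ref{e.1} without invoking the representation theorem. For ${\mathsf 0}<a<{\mathsf 1}$ with $f(a)\neq 0$ it applies the defining supremum \eqref{tangentnorm} to the two-element decomposition $\{a,{\mathsf 1}-a\}$ (using Lemma \ref{suppnorm} to ensure the denominators are nonzero) and obtains
\[
\|f\|_\nu^2\geq \frac{f(a)^2}{4\,\nu(a)(1-\nu(a))}\geq f(a)^2,
\]
via the elementary bound $4t(1-t)\leq 1$. Since $\|f_+\|_1$ lies in the closure of $\{f(a):{\mathsf 0}<a<{\mathsf 1}\}$ and $f({\mathsf 1})=0$ forces $\|f\|_1=2\|f_+\|_1$, the result follows.

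Your route is shorter once Theorem \ref{Cstern} is in hand, and it makes the origin of the factor $2$ transparent (two summands). The paper's route is more self-contained: it needs nothing beyond the definition of $\|\cdot\|_\nu$ and the preceding Lemma \ref{suppnorm}, at the cost of invoking a standard but not entirely trivial fact about the numerical range of hermitian functionals. Your parenthetical alternative via \eqref{euk.4} and Lemma \ref{normformel} is also valid and is in fact the $vN$-algebra version of your main argument.
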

\begin{proof}
Let $f\not={\mathsf 0}$, $f\in {\mathsf{T}}_\nu(M)$. Then, for ${\mathsf 0}< a < {\mathsf 1}$ with $f(a)\not=0$ we have $f({\mathsf 1}-a)=-f(a)\not=0$. By Lemma \ref{suppnorm}, $\nu(a)\not=0$ as well as $\nu({\mathsf 1}-a)\not=0$ then have to be fulfilled,  and  by definition of $\|f\|_\nu$ we conclude that the following holds:
\[
\|f\|^2_\nu\geq \frac{1}{4}\,\biggl(\frac{f(a)^2}{\nu(a)}+\frac{f({\mathsf 1}-a)^2}{\nu({\mathsf 1}-a)}\biggr)=f(a)^2 \frac{1}{4 \,\nu(a)(1-\nu(a))}\geq f(a)^2
\]
For the last estimate the fact has been taken into account that  $0<4\,\nu(a)(1-\nu(a))\leq 1$ is implied by $0<\nu(a)<1$. It is known that for a bounded hermitian linear form $f$ on a unital  ${\mathsf C}^*$-algebra the numerical value $\|f_+\|_1$ given by the functional norm of the positive part $f_+$ in the canonical decomposition of $f$ belongs to the closure of the numerical range  $\{f(a):{\mathsf 0}< a < {\mathsf 1} \}$. Hence, from the above estimate
\[
\|f\|_\nu\geq \|f_+\|_1
\]
follows. Since $f({\mathsf 1})=0$ implies $\|f\|_1=2\,\|f_+\|_1$, from this the assertion follows.
\end{proof}
\begin{corolla}\label{compT}
 When equipped with the $\|\cdot\|_\nu$-norm ${\mathsf{T}}_\nu(M)$ is a Banach space.
\end{corolla}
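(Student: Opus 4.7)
The plan is to identify $({\mathsf{T}}_\nu(M), \|\cdot\|_\nu)$ isometrically with a closed real-linear subspace of a real Hilbert space, from which completeness follows at once.

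First, I would fix any unital $^*$-representation $\{\pi, {\mathcal H}_\pi\}$ of $M$ with ${\mathcal S}_{\pi,M}(\nu) \neq \emptyset$ (the GNS-representation of $\nu$ will do) and pick $\varphi \in {\mathcal S}_{\pi,M}(\nu)$. By Theorem \ref{Cstern}, the equivalence \eqref{Cstern1}$\Leftrightarrow$\eqref{Cstern2} allows me to define a map
\[
\Phi : {\mathsf{T}}_\nu(M) \longrightarrow [\pi(M)_{\mathsf h}\varphi], \quad f \longmapsto \hat{\psi}_0(f),
\]
where $\hat{\psi}_0(f)$ is the unique vector in $[\pi(M)_{\mathsf h}\varphi]$ representing $f$ as in Theorem \ref{Cstern}\,\eqref{Cstern2}. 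By the uniqueness statement there and the sesquilinearity of $\langle\cdot,\cdot\rangle_\pi$, $\Phi$ is real-linear, and the last assertion of Theorem \ref{Cstern} gives $\|\Phi(f)\|_\pi = \|f\|_\nu$, i.e.\ $\Phi$ is a real-linear isometry.

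Next, I would determine the image of $\Phi$. Since each $f\in{\mathsf{T}}_\nu(M)$ satisfies $f({\mathsf 1})=0$, the corresponding $\hat{\psi}_0$ lies in
\[
H_0 := \bigl\{\xi \in [\pi(M)_{\mathsf h}\varphi] : \Re\langle\xi,\varphi\rangle_\pi = 0\bigr\}.
\]
Conversely, the implication \eqref{Cstern2}$\Rightarrow$\eqref{Cstern1} in Theorem \ref{Cstern} shows that every $\xi \in H_0$ arises as $\hat{\psi}_0(f)$ for some $f \in {\mathsf{T}}_\nu(M)$ (namely the hermitian form $f(x)=\langle\pi(x)\xi,\varphi\rangle_\pi+\langle\pi(x)\varphi,\xi\rangle_\pi$, which trivially satisfies $f({\mathsf 1})=2\Re\langle\xi,\varphi\rangle_\pi=0$ and $\|f\|_\nu=\|\xi\|_\pi<\infty$). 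Hence $\Phi$ maps ${\mathsf{T}}_\nu(M)$ isometrically onto $H_0$.

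Finally, I would observe that $H_0$ is the intersection of the $\langle\cdot,\cdot\rangle_{\pi,{\mathbb R}}$-closed subspace $[\pi(M)_{\mathsf h}\varphi]$ of $({\mathcal H}_\pi)_{\mathbb R}$ with the closed real hyperplane $\varphi^{\perp_{\mathbb R}}$ (orthogonal to $\varphi$ with respect to $\langle\cdot,\cdot\rangle_{\pi,{\mathbb R}}$), and therefore is itself a closed real-linear subspace of the real Hilbert space $({\mathcal H}_\pi)_{\mathbb R}$. As such, $H_0$ is complete, and by the isometric bijection $\Phi$ so is ${\mathsf{T}}_\nu(M)$ under $\|\cdot\|_\nu$. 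Nothing in the argument is genuinely hard: the bulk of the work has already been done in Theorem \ref{Cstern}, and the only point that requires care is noticing that the constraint $f({\mathsf 1})=0$ in Definition \ref{trdef} translates exactly into the $\langle\cdot,\cdot\rangle_{\pi,{\mathbb R}}$-orthogonality condition $\Re\langle\hat{\psi}_0,\varphi\rangle_\pi=0$, which preserves closedness. (Incidentally, this argument also shows that $({\mathsf{T}}_\nu(M),\|\cdot\|_\nu)$ is not merely Banach but a real Hilbert space, a fact that will be useful later.)
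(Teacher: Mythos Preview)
Your proof is correct and follows essentially the same approach as the paper: both identify ${\mathsf{T}}_\nu(M)$ isometrically with a closed real subspace of $[\pi(M)_{\mathsf h}\varphi]$ via the correspondence $f\leftrightarrow\hat{\psi}_0$, and then use completeness of the Hilbert space side (the paper phrases this as a Cauchy-sequence argument, you as an isometric isomorphism onto $H_0$). One minor point of precision: Theorem \ref{Cstern}\,\eqref{Cstern2} is stated as ``for \emph{some} unital $^*$-representation'', so to define $\Phi$ with respect to a \emph{fixed} $\{\pi,{\mathcal H}_\pi\}$ you should cite Corollary \ref{tanfolg}\,\eqref{tanfolg01}, which is exactly the statement that the representation \eqref{Cstern2} works in any $\pi$ with nontrivial $\nu$-fibre---this is also what the paper invokes.
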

\begin{proof}
As mentioned in Definition \ref{e.1}, the extended positive function $M^*_{\mathsf h}\ni f\mapsto \|f\|_\nu$ is subadditive and (real) absolutely homogeneous. Thus, by Definition \ref{trdef}, ${\mathsf{T}}_\nu(M)$ is a real linear space with seminorm $ {\mathsf{T}}_\nu(M)\ni f\mapsto \|f\|_\nu$. For given $f\in {\mathsf{T}}_\nu(M)$, let $f$ be represented as asserted by  \eqref{tanfolg0}, with regard to a unital $^*$-representation $\pi_0$ with respect to which the fibre of $\nu$ is non-trivial. Then, by Corollary \ref{tanfolg}, the  unique $\hat{\xi}_0\in [\pi_0(M)_{\mathsf h}\varphi]$ figuring in context of \eqref{tanfolg0}  is obeying  $$\|f\|_\nu=\|\hat{\xi}_0\|_{\pi_0}$$
Hence, $\|f\|_\nu=0$ implies  $\hat{\xi}_0={\mathsf 0}$. By formula \eqref{tanfolg0} then $f={\mathsf 0}$ follows. This proves that $ {\mathsf{T}}_\nu(M)\ni f\mapsto \|f\|_\nu$ is a norm.
Now, let $\{f_n\}\subset {\mathsf{T}}_\nu(M)$ be a Cauchy-sequence in respect of the norm $\|\cdot\|_\nu$. As in Corollary \ref{tanfolg}, let $\hat{\xi}(n)_0$ be the uniquely determined element of $ [\pi_0(M)_{\mathsf h}\varphi]$ and corresponding to $f_n$  in  accordance to formula \eqref{tanfolg0}. By linearity the assertion of Corollary \ref{tanfolg} then amounts to $$\|f_n-f_m\|_\nu =\| \hat{\xi}(n)_0-\hat{\xi}(m)_0\|_{\pi_0}$$
for all $n,m\in {{\mathbb{N}}}$. Hence, $\{\hat{\xi}(n)_0\}$ is a Cauchy-sequence in $ [\pi_0(M)_{\mathsf h}\varphi]$, and thus has to converge to some $\hat{\xi}_0\in  [\pi_0(M)_{\mathsf h}\varphi]$. The linear form $f$ given in accordance with formula \eqref{tanfolg0} then obviously will satisfy  $f=\|\cdot\|_\nu-\lim_{n\to\infty} f_n$. Hence, under the $\|\cdot\|_\nu$-norm the normed real linear space ${\mathsf{T}}_\nu(M)$ will be complete.
\end{proof}
Remind that $\nu\in {\mathcal S}(M)$ is termed `mixed state', if $\varrho, \mu\in {\mathcal S}(M)$ exist with $\varrho\not=\mu$ and
$\nu=(\varrho+\mu)/2$. In a unital ${\mathsf{C}}^*$-algebra $M$, the set  of mixed states reads
\begin{equation}\label{ccomb0}
{\mathcal S}^{\mathsf{mix}}(M)={\mathcal S}(M)\backslash {\mathsf{ex}}\,{\mathcal S}(M)
\end{equation}
At a mixed state, the tangent space will be non-trivial.
\begin{lemma}\label{extriv}
Suppose $\nu\in{\mathcal S}^{\mathsf{mix}}(M)$. Then ${\mathsf{T}}_\nu(M)\not=\{{\mathsf 0}\}$.
\end{lemma}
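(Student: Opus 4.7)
The plan is to exhibit an explicit nonzero element of $\mathsf{T}_\nu(M)$ built directly from the decomposition witnessing mixedness. Since $\nu \in \mathcal{S}^{\mathsf{mix}}(M)$, fix $\varrho, \mu \in \mathcal{S}(M)$ with $\varrho \neq \mu$ and $\nu = (\varrho + \mu)/2$, and set $f = \varrho - \mu \in M^*_{\mathsf{h}} \setminus \{{\mathsf 0}\}$. Trivially $f({\mathsf 1}) = \varrho({\mathsf 1}) - \mu({\mathsf 1}) = 0$, so in view of Definition \ref{trdef} the entire task reduces to showing $\|f\|_\nu < \infty$.

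To estimate $\|f\|_\nu$ via formula \eqref{tangentnorm}, I would take an arbitrary finite positive decomposition $\{x_1,\ldots,x_n\}$ of the unity in $M$ and examine the individual summands $f(x_j)^2/\nu(x_j)$ whenever $\nu(x_j) \neq 0$. Setting $a_j = \varrho(x_j) \geq 0$, $b_j = \mu(x_j) \geq 0$, one has $f(x_j) = a_j - b_j$ and $\nu(x_j) = (a_j + b_j)/2$. Since $|a_j - b_j| \leq a_j + b_j$, the elementary inequality
\begin{equation*}
\frac{(a_j - b_j)^2}{(a_j + b_j)/2} \;\leq\; 2(a_j + b_j)
\end{equation*}
holds on the set of indices where $a_j + b_j > 0$ (and on the complementary set $a_j = b_j = 0$, forcing $f(x_j) = 0$, so no contribution arises). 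Summing yields
\begin{equation*}
{\sum_{j}}^\prime \frac{f(x_j)^2}{\nu(x_j)} \;\leq\; 2 \sum_{j=1}^{n} \bigl(\varrho(x_j) + \mu(x_j)\bigr) \;=\; 2\bigl(\varrho({\mathsf 1}) + \mu({\mathsf 1})\bigr) \;=\; 4,
\end{equation*}
and this bound is independent of the chosen decomposition. Taking square roots and halving in accordance with \eqref{tangentnorm} then gives $\|f\|_\nu \leq 1 < \infty$, so $f \in \mathsf{T}_\nu(M)$ by Definition \ref{trdef}, and $f \neq {\mathsf 0}$ since $\varrho \neq \mu$.

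There is no genuine obstacle here: the argument is a one-line majorisation built on $|a-b| \leq a+b$ for nonnegative reals, combined with the normalisation $\varrho({\mathsf 1}) = \mu({\mathsf 1}) = 1$. The only point worth care is the bookkeeping of the primed summation convention, which is handled automatically because $\nu(x_j) = 0$ forces both $\varrho(x_j) = 0$ and $\mu(x_j) = 0$ by positivity, hence $f(x_j) = 0$ on the omitted indices.
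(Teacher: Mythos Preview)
Your proof is correct and essentially identical to the paper's: both pick $f=\varrho-\mu$ from a decomposition $\nu=(\varrho+\mu)/2$, use the elementary inequality $(a-b)^2\leq(a+b)^2$ for nonnegative reals to bound each summand by $4\,\nu(x_j)$, and obtain $\|f\|_\nu\leq 1$.
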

\begin{proof}
Since $\nu$ is a mixed state, there exist two  states $\varrho$ and $\mu$ with $\varrho\not=\mu$ and $\nu=(\varrho+\mu)/2$. Consider the bounded linear form $f=\varrho-\mu$. Obviously, $f$ is hermitian and is obeying $f({\mathsf 1})=0$ and $f\not={\mathsf 0}$. Also, due to  $\varrho(a)^2+\mu(a)^2-2\,\varrho(a)\mu(a)\leq \varrho(a)^2+\mu(a)^2+2\,\varrho(a)\mu(a)$ for $a\in M$ with ${\mathsf 0}\leq a\leq {\mathsf 1}$
we see $f(a)^2\leq 4\,\nu(a)^2$. Thus, if $\{a_k\}\subset M_+$ is a finite positive decomposition of the unity, we find
\[
{\sum_k}^{\,\prime}\, \frac{f(a_k)^2}{\nu(a_k)}\leq \sum_k\,4\,\nu(a_k)= 4
\]
By Definition \ref{e.1}, see formula \eqref{tangentnorm}, $\|f\|_\nu\leq 1$ follows. Thus  $f\in {\mathsf{T}}_\nu(M)\backslash \{{\mathsf 0}\}$.
\end{proof}
Close by some auxiliary results which will prove useful either when relating  mixed states to pure states in terms of implementable curves or when embedding curves of states over certain subalgebras as curves in the state space.
\begin{lemma}\label{auxpure}
For a unital $^*$-representation $\{\pi,{\mathcal H}_\pi\}$, let unit vectors $\varphi, \xi\in {\mathcal H}_\pi$ obey $\langle \xi,\varphi\rangle_\pi\not=0$. Suppose $\varphi\in {\mathcal{S}}_{\pi,M}(\nu)$, with $\nu\in {\mathsf{ex}}\,{\mathcal S}(M)$. Then $p_\pi^{\,\prime}(\varphi)\leq p_\pi^{\,\prime}(\xi)$, with equality  occurring if the state $\varrho$ with $\xi\in {\mathcal S}_{\pi,M}(\varrho)$ is obeying $\varrho\in {\mathsf{ex}}\,{\mathcal S}(M)$, too.
\end{lemma}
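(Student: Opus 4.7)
The plan is to exploit the commutant projection $P := p_\pi^{\,\prime}(\xi) \in \pi(M)^{\,\prime}$, whose range is exactly $[\pi(M)\xi]$, and decompose $\varphi$ orthogonally as $\varphi = P\varphi + P^\perp\varphi$. Because $P$ commutes with $\pi(M)$, the vector state $\nu$ splits cleanly:
\[
\nu(x) \;=\; \langle \pi(x)P\varphi, P\varphi\rangle_\pi \;+\; \langle \pi(x)P^\perp\varphi, P^\perp\varphi\rangle_\pi,\qquad x\in M.
\]
The hypothesis $\langle\xi,\varphi\rangle_\pi \neq 0$ combined with $P\xi = \xi$ gives $\langle P\varphi,\xi\rangle_\pi = \langle\varphi,\xi\rangle_\pi \neq 0$, so $\alpha := \|P\varphi\|_\pi^2 > 0$.

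The crucial step is to invoke the purity of $\nu$ to force $\beta := \|P^\perp\varphi\|_\pi^2 = 0$. If $\beta > 0$ were allowed, the display above would present $\nu$ as a non-trivial convex combination of two vector states, and extremality would constrain this to the scenario in which both summands implement $\nu$ itself. I would then rule out this ``diagonal'' scenario by appealing to GNS-uniqueness: the mutually orthogonal cyclic subspaces $[\pi(M)P\varphi]$ and $[\pi(M)P^\perp\varphi]$ would each carry a copy of the irreducible GNS representation of $\nu$, intertwined by a partial isometry in $\pi(M)^{\,\prime}$; the concrete positioning of $\xi$ as a cyclic vector for the $P$-summand then contradicts that $\varphi$ can straddle both summands while still satisfying $\langle\xi,\varphi\rangle_\pi\neq 0$. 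With $P^\perp\varphi = 0$ in hand, $\varphi \in P{\mathcal H}_\pi = [\pi(M)\xi]$, so the closed $\pi(M)$-invariant span of $\varphi$ lies inside that of $\xi$, giving $p_\pi^{\,\prime}(\varphi)\leq p_\pi^{\,\prime}(\xi)$.

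The equality assertion under the additional hypothesis $\varrho\in {\mathsf{ex}}\,{\mathcal S}(M)$ follows by symmetry: the hypotheses are symmetric in $\varphi$ and $\xi$, so rerunning the argument with their r\^oles interchanged yields the reverse inclusion and hence equality. The main obstacle is precisely the step ruling out the diagonal configuration under purity alone—handling it cleanly requires keeping the entire decomposition inside the commutant $\pi(M)^{\,\prime}$, which is exactly what the choice $P = p_\pi^{\,\prime}(\xi)$ arranges.
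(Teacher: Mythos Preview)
The gap is in the step where you claim to rule out the ``diagonal scenario'' ($\alpha,\beta>0$ with both $P\varphi/\sqrt{\alpha}$ and $P^\perp\varphi/\sqrt{\beta}$ implementing $\nu$). Your GNS-uniqueness sketch does not exclude this: nothing prevents $\varphi$ from sitting diagonally across two unitarily equivalent irreducible summands while still having nonzero overlap with $\xi$. Concretely, let $\{\pi_1,{\mathcal H}_1,\varphi_1\}$ be the GNS triple of the pure state $\nu$, set $\pi=\pi_1\oplus\pi_1$ on ${\mathcal H}_\pi={\mathcal H}_1\oplus{\mathcal H}_1$, and take $\varphi=(\varphi_1,\varphi_1)/\sqrt{2}$, $\xi=(\varphi_1,0)$. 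Then $\varphi\in{\mathcal S}_{\pi,M}(\nu)$, $\langle\xi,\varphi\rangle_\pi=1/\sqrt{2}\neq 0$, $P=p_\pi^{\,\prime}(\xi)$ is the projection onto ${\mathcal H}_1\oplus 0$, and both $P\varphi$ and $P^\perp\varphi$ are nonzero vectors implementing $\nu$. No contradiction arises.

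This example in fact shows more: here $[\pi(M)\varphi]$ is the diagonal $\{(v,v):v\in{\mathcal H}_1\}$ while $[\pi(M)\xi]={\mathcal H}_1\oplus 0$, so $p_\pi^{\,\prime}(\varphi)\not\leq p_\pi^{\,\prime}(\xi)$ and the lemma's conclusion fails outright (and since $\xi$ also implements the pure state $\nu$, the equality assertion fails too). The paper argues along a different line---using irreducibility of $\pi$ on $[\pi(M)\varphi]$ to obtain $\overline{p_\pi^{\,\prime}(\varphi)[\pi(M)\xi]}=[\pi(M)\varphi]$, and then asserting that this ``is the same as'' $[\pi(M)\xi]\supset[\pi(M)\varphi]$---but that last identification is exactly what the configuration above refutes: projecting ${\mathcal H}_1\oplus 0$ onto the diagonal is surjective even though ${\mathcal H}_1\oplus 0$ does not contain the diagonal. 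So the obstruction you hit is intrinsic to the statement in its present generality rather than to your strategy; an additional hypothesis (e.g.\ cyclicity of $\varphi$ for $\pi$, which is how the lemma is applied in the proof of Corollary~\ref{auxpure2}) is needed.
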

\begin{proof}
Owing to $p_\pi^{\,\prime}(\varphi){\mathcal H}_\pi=[\pi(M)\varphi]$ and since the state $\nu$ implemented by $\varphi$ is pure, the action of $\pi$ in restriction to $[\pi(M)\varphi]$ has to be irreducible. On the other hand, the subspace $p_\pi^{\,\prime}(\varphi)[\pi(M)\xi]\subset [\pi(M)\varphi]$  is left invariant under the action of $\pi$. Hence, either $p_\pi^{\,\prime}(\varphi)[\pi(M)\xi]=\{{\mathsf{0}}\}$ or $p_\pi^{\,\prime}(\varphi)[\pi(M)\xi]= [\pi(M)\varphi]$ must be fulfilled. However, due to $\langle \xi,\varphi\rangle_\pi\not=0$ only the latter options can occur. This is the same as $[\pi(M)\xi]\supset[\pi(M)\varphi]$, or equivalently, $p_\pi^{\,\prime}(\varphi)\leq p_\pi^{\,\prime}(\xi)$. Finally, if the state $\varrho$ implemented by $\xi$ via $\pi$ is pure, then by symmetry also  $p_\pi^{\,\prime}(\xi)\leq p_\pi^{\,\prime}(\varphi)$ holds.
\end{proof}
\begin{corolla}\label{auxpure1}
Let $\gamma=(\nu_t)\subset {\mathcal S}(M)$ be a parameterized curve passing through $\nu=\nu_0$, and let $\{\pi,{\mathcal H}_\pi\}$ be $\gamma$-compliant  around $\nu$. Provided $\nu\in {\mathcal S}^{\mathsf{mix}}(M)$, then also $\nu_t\in {\mathcal S}^{\mathsf{mix}}(M)$ holds, for all parameter values of sufficiently small modulus.
\end{corolla}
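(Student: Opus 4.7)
The plan is to produce, for each small $|t|$, an explicit non-trivial convex decomposition of $\nu_t$ into two distinct states, by exporting a decomposition of $\nu$ via a fixed operator in the commutant $\pi(M)^{\,\prime}$ and then exploiting continuity of the implementing family.

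First I would fix the $\gamma$-compliant representation $\{\pi,{\mathcal H}_\pi\}$ and a differentiable local implementation $I_\pi\ni t\mapsto \varphi_t\in {\mathcal S}_{\pi,M}(\nu_t)$, so that $\varphi_t=\varphi+t\psi+{\mathbf o}(t)$ with $\varphi=\varphi_0$. In particular $\varphi_t\to \varphi$ in $\|\cdot\|_\pi$ as $t\to 0$. Since $\nu\in {\mathcal S}^{\mathsf{mix}}(M)$, the restriction of $\pi$ to $[\pi(M)\varphi]=p_\pi^{\,\prime}(\varphi){\mathcal H}_\pi$ is a reducible representation (this is the standard characterization of mixedness via the GNS construction), so the commutant $p_\pi^{\,\prime}(\varphi)\pi(M)^{\,\prime}p_\pi^{\,\prime}(\varphi)$ strictly exceeds $\mathbb{C}\cdot p_\pi^{\,\prime}(\varphi)$ and contains a non-trivial orthoprojection $p$. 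Trivially extending $p$ yields $z\in \pi(M)^{\,\prime}$ with $0\leq z\leq {\mathsf 1}$; moreover $z\varphi$ cannot be a scalar multiple of $\varphi$, since $z^2=z$ and $z\varphi=\lambda\varphi$ would force $\lambda\in\{0,1\}$ and hence $p\in \{{\mathsf 0},p_\pi^{\,\prime}(\varphi)\}$ by cyclicity of $\varphi$, contradicting the choice of $p$.

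Next I would define, for all $t\in I_\pi$, the positive linear forms on $M$
\[
\varrho_t(x)=\langle \pi(x)\sqrt{z}\,\varphi_t,\sqrt{z}\,\varphi_t\rangle_\pi,\quad \mu_t(x)=\langle \pi(x)\sqrt{{\mathsf 1}-z}\,\varphi_t,\sqrt{{\mathsf 1}-z}\,\varphi_t\rangle_\pi,
\]
positivity being ensured because $\sqrt{z},\sqrt{{\mathsf 1}-z}\in \pi(M)^{\,\prime}$. Then $\nu_t=\varrho_t+\mu_t$ by construction. At $t=0$, the identity $\langle \pi(x) z\varphi,\varphi\rangle_\pi=\langle z\varphi,\varphi\rangle_\pi\,\nu(x)$ for all $x\in M$ would force $(z-\langle z\varphi,\varphi\rangle_\pi{\mathsf 1})\varphi=0$, contradicting the choice of $z$; hence $\varrho_0$ is not proportional to $\nu=\varrho_0+\mu_0$, and consequently $\varrho_0$ and $\mu_0$ are non-proportional non-zero positive linear forms. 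Normalizing, the states $\hat{\varrho}_0=\varrho_0/\varrho_0({\mathsf 1})$ and $\hat{\mu}_0=\mu_0/\mu_0({\mathsf 1})$ are distinct and give the convex decomposition $\nu=\varrho_0({\mathsf 1})\,\hat{\varrho}_0+\mu_0({\mathsf 1})\,\hat{\mu}_0$.

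Finally I would invoke norm-continuity: since $\varphi_t\to \varphi$ in $\|\cdot\|_\pi$, the maps $t\mapsto \varrho_t$ and $t\mapsto \mu_t$ are continuous in the functional norm $\|\cdot\|_1$. Therefore $\varrho_t({\mathsf 1})$ and $\mu_t({\mathsf 1})$ stay bounded away from zero on a suitable neighbourhood of $0$, and $\hat{\varrho}_t,\hat{\mu}_t$ are well-defined states remaining close to $\hat{\varrho}_0,\hat{\mu}_0$. Since $\hat{\varrho}_0\neq \hat{\mu}_0$, continuity gives $\hat{\varrho}_t\neq \hat{\mu}_t$ for all $|t|$ sufficiently small, and $\nu_t=\varrho_t({\mathsf 1})\,\hat{\varrho}_t+\mu_t({\mathsf 1})\,\hat{\mu}_t$ is then a proper convex combination of two distinct states, whence $\nu_t\in {\mathcal S}^{\mathsf{mix}}(M)$.

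I expect the main obstacle to be the first step: extracting from the abstract assumption of mixedness of $\nu$ a concrete operator $z\in \pi(M)^{\,\prime}$ with $0\leq z\leq {\mathsf 1}$ and $z\varphi\not\in\mathbb{C}\varphi$, using only the identification of mixedness with reducibility of $\pi|_{[\pi(M)\varphi]}$ and the lifting of projections from $p_\pi^{\,\prime}(\varphi)\pi(M)^{\,\prime}p_\pi^{\,\prime}(\varphi)$ to $\pi(M)^{\,\prime}$. Once $z$ is chosen, the remaining continuity argument is routine and entirely local around $t=0$.
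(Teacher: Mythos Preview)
Your proof is correct and takes a genuinely different route from the paper's. The paper argues by contradiction: assuming a sequence $t_k\to 0$ with each $\nu_{t_k}$ pure, it applies Lemma~\ref{auxpure} (on the ordering of cyclic subspaces for vectors implementing pure states with non-zero inner product) to obtain $p_\pi^{\,\prime}(\varphi_{t_k})=p_\pi^{\,\prime}(\varphi_{t_1})$ for all $k$ and $p_\pi^{\,\prime}(\varphi)\geq p_\pi^{\,\prime}(\varphi_{t_k})$, and then, passing to the limit $\varphi_{t_k}\to\varphi$, concludes $p_\pi^{\,\prime}(\varphi_{t_1})=p_\pi^{\,\prime}(\varphi)$; irreducibility of $\pi$ on $[\pi(M)\varphi_{t_1}]$ then forces irreducibility on $[\pi(M)\varphi]$, contradicting $\nu\in{\mathcal S}^{\mathsf{mix}}(M)$. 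Your argument is direct and constructive: a nontrivial projection $z\in\pi(M)^{\,\prime}$, read off from reducibility of $\pi|_{[\pi(M)\varphi]}$, splits every $\nu_t$ as $\varrho_t+\mu_t$, and the normalizations remain distinct for small $|t|$ by norm-continuity of $t\mapsto\varphi_t$. This bypasses Lemma~\ref{auxpure} entirely and produces an explicit convex decomposition witnessing mixedness of $\nu_t$. Incidentally, the obstacle you anticipated is milder than stated: since $p_\pi^{\,\prime}(\varphi)\in\pi(M)^{\,\prime}$, the compressed commutant $p_\pi^{\,\prime}(\varphi)\,\pi(M)^{\,\prime}\,p_\pi^{\,\prime}(\varphi)$ is already a subalgebra of $\pi(M)^{\,\prime}$, so any projection there is automatically a projection in $\pi(M)^{\,\prime}$ and no separate lifting is required.
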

\begin{proof}
Under the condition $\nu\in {\mathcal S}^{\mathsf{mix}}(M)$, assume the contrary of the assertion to hold. That is, in view of \eqref{ccomb0}, there are parameter values $\{t_k\}$ with $\lim_{k\to \infty} t_k=0$ but $\nu_{t_k}\in {\mathsf{ex}}\,{\mathcal S}(M)$. We are going to show that this will lead into contradiction.\\
Since $\{\pi,{\mathcal H}_\pi\}$ is $\gamma$-compliant,  a differentiable at $t=0$ implementation  $\varphi_t\in {\mathcal S}_{\pi,M}(\nu_t)$  of $\gamma$ exists. Thus, by continuity, for all $t$ with sufficiently small modulus  we may suppose that $\|\varphi-\varphi_t\|_\pi^2< 1/2$ for $\varphi=\varphi_0$. Also, for any two values $s,t$ of the mentioned specification we then have $\|\varphi-\varphi_t\|_\pi^2<2,\,\|\varphi_s-\varphi_t\|_\pi^2< 2$.
That is, $\langle \varphi,\varphi_t\rangle_\pi\not=0$ and $\langle \varphi_s,\varphi_t\rangle_\pi\not=0$ are fulfilled there. Thus, we may even suppose that the terms of the sequence $\{t_k\}$ are chosen accordingly, from the very beginning. Put this case now. Then, $\langle \varphi,\varphi_{t_k}\rangle_\pi\not=0$ and $\langle \varphi_{t_1},\varphi_{t_k}\rangle_\pi\not=0$, for all $k\in {\mathbb{N}}$.
Application of Lemma \ref{auxpure} to the pairs $\{\varphi,\varphi_{t_k}\}$ and $\{\varphi_{t_k},\varphi_{t_1}\}$ under the suppositions at hand imply $p_\pi^{\,\prime}(\varphi)\geq p_\pi^{\,\prime}(\varphi_{t_k})$ and  $p_\pi^{\,\prime}(\varphi_{t_k})=p_\pi^{\,\prime}(\varphi_{t_1})$,  for all $k\in {{\mathbb{N}}}$. Due to $\varphi=\lim_{k\to \infty} \varphi_{t_k}$ from this  $p_\pi^{\,\prime}(\varphi_{t_1})\varphi=\varphi$ is inferred to hold. Hence we have  $p_\pi^{\,\prime}(\varphi_{t_1})\geq p_\pi^{\,\prime}(\varphi)$. In summarizing, we even see that $p_\pi^{\,\prime}(\varphi_{t_1})= p_\pi^{\,\prime}(\varphi)$ is fulfilled. Now, by assumption $\nu_1$ is pure and therefore the action of $\pi$ on $p_\pi^{\,\prime}(\varphi_{t_1}){\mathcal H}_\pi$ is irreducible. Then, owing to $p_\pi^{\,\prime}(\varphi_{t_1})= p_\pi^{\,\prime}(\varphi)$ the same has to be true for the cyclic representation generated by $\pi$ over $p_\pi^{\,\prime}(\varphi){\mathcal H}_\pi=[\pi(M)\varphi]$ with   cyclic vector $\varphi$. This however  contradicts the fact that $\nu\not\in \mathsf{ex }\,{\mathcal S}(M)$ by supposition.
\end{proof}
Complementary to Corollary \ref{auxpure1} is the following addendum to Corollary \ref{tanfolg}.
\begin{corolla}\label{auxpure2}
Let $\nu\in {\mathsf{ex}}\,{\mathcal S}(M)$, with ${\mathsf{T}}_\nu(M)\not=\{{\mathsf{0}}\}$.
For each $f\in{\mathsf{T}}_\nu(M)\backslash \{{\mathsf{0}}\}$, consider the   parameterized curve $\gamma=(\nu_t)$ given in accordance with formula \eqref{tanfolg0b}. Then, $\nu_t\in \mathsf{ex }\,{\mathcal S}( M)$ with ${\mathsf{T}}_{\nu_t}(M)\not=\{{\mathsf{0}}\}$ holds,  for all $t$ of sufficiently small modulus.
\end{corolla}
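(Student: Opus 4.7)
My plan is to use the explicit curve and implementation from Corollary \ref{tanfolg} and split the proof into two essentially independent pieces: (a) purity is preserved, and (b) the tangent space stays non-trivial. For (a) I would exploit irreducibility of the GNS-representation of the pure state $\nu$, and for (b) I would just differentiate the explicit formula for $\nu_t$ and invoke continuity together with Theorem \ref{Cstern}.

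More precisely, fix $\varphi_0 \in {\mathcal S}_{\pi_0,M}(\nu)$ in the GNS-representation $\{\pi_0,{\mathcal H}_{\pi_0}\}$ of $\nu$ (so that $[\pi_0(M)\varphi_0]={\mathcal H}_{\pi_0}$ and $\pi_0$ acts irreducibly, since $\nu$ is pure). By Corollary \ref{tanfolg} there is a unique $\hat\xi_0\in [\pi_0(M)_{\mathsf h}\varphi_0]$ producing $f$ via \eqref{tanfolg0}, and the curve $\gamma$ of \eqref{tanfolg0b} is implemented by $\varphi_t=(\varphi_0+t\hat\xi_0)/\sqrt{1+t^2\|f\|_\nu^2}$. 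Since $\hat\xi_0\in[\pi_0(M)\varphi_0]={\mathcal H}_{\pi_0}$, the vector $\varphi_t$ lies in ${\mathcal H}_{\pi_0}$, and for $|t|$ small one has $\langle\varphi_t,\varphi_0\rangle_{\pi_0}=1/\sqrt{1+t^2\|f\|_\nu^2}\neq 0$, in particular $\varphi_t\neq 0$. Because $\pi_0$ acts irreducibly on ${\mathcal H}_{\pi_0}$, the closed invariant subspace $[\pi_0(M)\varphi_t]$ either vanishes or equals ${\mathcal H}_{\pi_0}$; the unitality of $\pi_0$ together with $\varphi_t\neq 0$ rules out the first case, so the GNS-representation of $\nu_t$ coincides with the irreducible $\{\pi_0,{\mathcal H}_{\pi_0}\}$. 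Hence $\nu_t\in{\mathsf{ex}}\,{\mathcal S}(M)$.

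For the tangent-space assertion, consider at a fixed small $t_0$ the reparametrization $\tilde\gamma:s\mapsto\nu_{t_0+s}$. Differentiating the explicit formula \eqref{tanfolg0b} gives
\begin{equation*}
f_{t_0}=\nu_s'\big|_{s=t_0}=\frac{(f+2t_0\omega)(1+t_0^2\|f\|_\nu^2)-(\nu+t_0 f+t_0^2\omega)(2t_0\|f\|_\nu^2)}{(1+t_0^2\|f\|_\nu^2)^2}
\end{equation*}
which is continuous in $t_0$ in the functional norm, and satisfies $f_0=f\neq{\mathsf 0}$. Hence $f_{t_0}\neq {\mathsf 0}$ for $|t_0|$ sufficiently small. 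Continuity of the derivation map $s\mapsto \nu_s'$ together with Lemma \ref{constr} guarantees that $\tilde\gamma$ is injective on a neighbourhood of $s=0$, so it is a bona fide parametrised $C^1$-curve passing through $\nu_{t_0}$; clearly $\{\pi_0,{\mathcal H}_{\pi_0}\}$ remains $\tilde\gamma$-compliant around $\nu_{t_0}$, with differentiable implementation $s\mapsto\varphi_{t_0+s}$. Applying the implication \eqref{Cstern3}$\Rightarrow$\eqref{Cstern1} of Theorem \ref{Cstern} then yields $f_{t_0}\in {\mathsf{T}}_{\nu_{t_0}}(M)\backslash\{{\mathsf 0}\}$, so ${\mathsf{T}}_{\nu_{t_0}}(M)\neq\{{\mathsf 0}\}$.

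The two restrictions on $|t|$—one ensuring $\varphi_t\neq 0$ (trivial) so purity kicks in, and one ensuring $f_t\neq {\mathsf 0}$—are each open conditions at $t=0$, so intersecting them gives the required common neighbourhood. I do not anticipate a serious obstacle: the only point that needs a moment of care is that the invariant-subspace argument for purity uses $\hat\xi_0\in[\pi_0(M)\varphi_0]$, which is built into Corollary \ref{tanfolg}, and that the continuity of $t\mapsto f_t$ in $\|\cdot\|_1$ is immediate from the quotient formula. Everything else is routine.
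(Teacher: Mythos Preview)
Your proof is correct and follows essentially the same route as the paper: both work in the GNS-representation of the pure state $\nu$, use irreducibility to conclude that every unit vector there implements a pure state, and then invoke continuity of $t\mapsto f_t$ together with Lemma \ref{constr} and Theorem \ref{Cstern} to get non-triviality of ${\mathsf T}_{\nu_t}(M)$. Your argument is a bit more explicit (writing out $\langle\varphi_t,\varphi_0\rangle_{\pi_0}$ and the quotient-rule expression for $f_{t_0}$), but the structure is identical.
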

\begin{proof}
Consider the cyclic $^*$-represention  $\{\pi,{\mathcal H}_\pi\}$ generated by $\nu$, with cyclic vector $\varphi\in {\mathcal S}_{\pi,M}(\nu)$. Note that since $\nu$ is pure we then must have that $N=\pi(M)^{\,\prime\prime}={\mathsf{B}}({\mathcal H}_\pi)$. But then, each other unit vector $\xi\in {\mathcal H}_\pi$ has to be a cyclic vector as well in respect of the action of $\pi$. Accordingly,  if $\xi\in {\mathcal S}_{\pi,M}(\varrho)$ is fulfilled, the state $\varrho$ implemented by $\xi$ has to be a pure state, too.  Now,
 since ${\mathcal S}_{\pi,M}(\nu)\not=\emptyset$ is fulfilled, the parameterized curve $\gamma=(\nu_t)$ given by formula \eqref{tanfolg0b},  by Corollary \ref{tanfolg}\,\eqref{tanfolg02} for each $f\in{\mathsf{T}}_\nu(M)\backslash \{{\mathsf{0}}\}$ can be implemented by a differentiable parameterized curve $[-1,1]\ni t\longmapsto \varphi_t\in {\mathcal S}_{\pi,M}(\nu_t)$,  relative to $\pi$. Thus, by the preliminary consideration  $\nu_t\in {\mathsf{ex}}\,{\mathcal S}(M)$ is fulfilled. Obviously, the derivative $\nu_t^{\,\prime}=f_t$ of \eqref{tanfolg0b} is continuous in $t$, with $f_0=f\not={\mathsf{0}}$. Hence, for all $t$ of sufficiently small modulus $f_t\not={\mathsf{0}}$ follows. But then, according to Lemma \ref{constr}, the parameterized curve $\gamma$ arising from the differentiable implementation $(\varphi_t)$ relative to $\pi$ around $\nu_t$ is passing through any of these states $\nu_t$, too. Hence, by Theorem \ref{Cstern}, $f_t\in {\mathsf{T}}_{\nu_t}(M)$, and therefore ${\mathsf{T}}_{\nu_t}(M)\not=\{{\mathsf{0}}\}$, for all $t$ with sufficiently small modulus.
\end{proof}
Now, consider a unital ${\mathsf{C}}^*$-algebra $M$ and a  ${\mathsf{C}}^*$-subalgebra $N$ containing the unit ${\mathsf{1}}$ of  $M$, with a projection map $\Phi$ of norm $1$ onto $N$ existing, that is, a linear map
$
\Phi: M\ni x\mapsto \Phi(x)\in N
$
obeying $\Phi(x)=x$, for all $x\in N$, and $\|\Phi(x)\|\leq \|x\|$, for all $x\in M$. As a unital positive linear mapping $\Phi$ is satisfying
\begin{equation}\label{condcp}
\Phi(a x b)=a\Phi(x) b,\,\forall a,b\in N,\,\forall x\in M
\end{equation}
see \cite[III,3.]{Take:79}. The latter condition implies $\Phi$ to be completely positive  \cite[IV,3.]{Take:79}. By means of $\Phi$ the state space  of the subalgebra $N$ can be identified (affinely) isometrically with the subset
\begin{equation}\label{condinv}
{\mathcal S}^\Phi(M)=\bigl\{\nu\in {\mathcal S}(M): \nu\circ \Phi=\nu\bigr\}={\mathcal S}(N)\circ\Phi
\end{equation}
of all states of ${\mathcal S}(M)$ which are left invariant under the dual action of the map $\Phi$.
\begin{lemma}\label{condexpect}
${\mathsf{T}}_\nu(M)|N={\mathsf{T}}_{\nu|N}(N)$, for each $\nu\in {\mathcal S}^\Phi(M)$.
\end{lemma}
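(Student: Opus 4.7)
The plan is to establish the two inclusions separately, using in one direction the restriction of linear forms from $M$ to $N$, and in the other direction the composition with the conditional expectation $\Phi$.

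For the inclusion ${\mathsf{T}}_\nu(M)|N\subset {\mathsf{T}}_{\nu|N}(N)$, suppose $f\in{\mathsf{T}}_\nu(M)$. Obviously $f|N$ is hermitian and $(f|N)({\mathsf 1})=f({\mathsf 1})=0$, since the units of $M$ and $N$ coincide. To obtain $\|f|N\|_{\nu|N}<\infty$, I would observe that every finite positive decomposition $\{x\}$ of the unity within $N$ is, trivially, a finite positive decomposition of the unity within $M$, and for $x_j\in N$ one has $\nu|N(x_j)=\nu(x_j)$ as well as $(f|N)(x_j)=f(x_j)$. Taking the supremum in Definition \ref{e.1} over the (smaller) family of decompositions lying within $N$ then gives
\[
\|f|N\|_{\nu|N}\leq \|f\|_\nu<\infty
\]
so that $f|N\in {\mathsf{T}}_{\nu|N}(N)$ follows in view of Definition \ref{trdef}.

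For the reverse inclusion, given $g\in {\mathsf{T}}_{\nu|N}(N)$ the natural candidate for a preimage is $f=g\circ\Phi$. Since $\Phi$ is a unital positive projection onto $N$, it is in particular $^*$-preserving, so $f$ is a hermitian linear form on $M$ with $f({\mathsf 1})=g(\Phi({\mathsf 1}))=g({\mathsf 1})=0$, and $f|N=g\circ\Phi|N=g$ since $\Phi$ restricts to the identity on $N$. The only point that needs verification is $\|f\|_\nu<\infty$. For any finite positive decomposition $\{x\}$ of the unity within $M$, the image $\{\Phi(x_j)\}$ is by linearity, positivity and unitality of $\Phi$ a finite positive decomposition of the unity within $N$, and the invariance relation $\nu\circ\Phi=\nu$ guarantees $\nu(x_j)=\nu(\Phi(x_j))=\nu|N(\Phi(x_j))$. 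Hence, for each $j$ with $\nu(x_j)\neq 0$,
\[
\frac{f(x_j)^2}{\nu(x_j)}=\frac{g(\Phi(x_j))^2}{\nu|N(\Phi(x_j))}
\]
while, for those $j$ with $\nu(x_j)=0$, we have $\nu|N(\Phi(x_j))=0$ as well, and $\Phi(x_j)$ lies in the positive part of the unit ball of $N$, so Lemma \ref{suppnorm} applied to $g$ yields $g(\Phi(x_j))=0=f(x_j)$, i.e.~these indices drop out on both sides. Summing and taking the supremum over $\{x\}$ therefore gives
\[
\|f\|_\nu\leq \|g\|_{\nu|N}<\infty,
\]
so $f=g\circ\Phi\in{\mathsf{T}}_\nu(M)$ with $f|N=g$.

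I do not expect a genuine obstacle: the only slightly delicate point is the handling of the indices with $\nu(x_j)=0$ in the supremum expression for $\|g\circ\Phi\|_\nu$, which however is fully resolved by the support property of tangent forms from Lemma \ref{suppnorm}. Together the two displayed estimates yield ${\mathsf{T}}_\nu(M)|N={\mathsf{T}}_{\nu|N}(N)$, and as a by-product the equality $\|g\circ\Phi\|_\nu=\|g\|_{\nu|N}$ for $g\in{\mathsf{T}}_{\nu|N}(N)$, showing that composition with $\Phi$ isometrically identifies ${\mathsf{T}}_{\nu|N}(N)$ with a closed subspace of ${\mathsf{T}}_\nu(M)$.
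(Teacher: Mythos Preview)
Your proof is correct and follows essentially the same approach as the paper: both inclusions are handled exactly as in the paper's argument, with $f=g\circ\Phi$ as the preimage and the observation that $\{\Phi(x_j)\}$ is a positive decomposition of the unity in $N$ with $\nu(x_j)=\nu|N(\Phi(x_j))$. Your explicit treatment of the indices with $\nu(x_j)=0$ via Lemma~\ref{suppnorm} is not actually needed, since the primed-sum convention in Definition~\ref{e.1} already drops those terms on both sides simultaneously; the isometry observation you add at the end is a correct bonus not stated in the paper.
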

\begin{proof}
Let $f\in {\mathsf{T}}_\nu(M)$. In view of  $N\subset M$, by Definition \ref{e.1}, see formula \eqref{tangentnorm}, obviously $\| f|N\|_{\nu|N}\leq \|f\|_\nu<\infty$ follows. Thus we have ${\mathsf{T}}_\nu(M)|N\subset{\mathsf{T}}_{\nu|N}(N)$.
On the other hand, for $g\in {\mathsf{T}}_{\nu|N}(N)$ consider $f=g\circ\Phi$. Then, $f|N=g$, and therefore
\begin{eqnarray*}
\|f\|_\nu^2 &=& \sup_{\{x\}\subset M_+} \frac{1}{4}\,{\sum_k}^{\,\prime} \frac{f(x_k)^2}{\nu(x_k)}=\sup_{\{x\}\subset M_+} \frac{1}{4}\,{\sum_k}^{\,\prime} \frac{g(\Phi(x_k))^2}{\nu(\Phi(x_k))}\\
&\leq& \sup_{\{y\}\subset N_+} \frac{1}{4}\,{\sum_k}^{\,\prime} \frac{g(y_k)^2}{\nu(y_k)}=\|g\|_{\nu|N}^2<\infty
\end{eqnarray*}
holds. Thus, for $g\in {\mathsf{T}}_{\nu|N}(N)$ there exists $f\in {\mathsf{T}}_\nu(M)$ with $f|N=g$. Accordingly,
${\mathsf{T}}_{\nu|N}(N)\subset{\mathsf{T}}_\nu(M)|N$ holds.
In view of the above then the assertion follows.
\end{proof}
\begin{corolla}\label{condexpect0}
Let $\nu\in {\mathcal S}^\Phi(M)$ and be $\gamma=(\nu_t)\subset {\mathcal S}(N)$  a parameterized curve passing through $\nu_0=\nu|N$ with $\gamma$-compliant $^*$-representation $\{\pi,{\mathcal H}_\pi\}$ around $\nu|N$.
\begin{enumerate}
\item \label{condexpect0a}
Then, $\gamma\circ\Phi=(\nu_t\circ\Phi)\subset {\mathcal S}^\Phi(M)$ is a parameterized curve passing through $\nu$ and admitting a $\gamma\circ\Phi$-compliant unital $^*$-representation  around $\nu$;
\item\label{condexpect0b}
 Suppose $ {\mathsf{T}}_{\nu}(M)\not=\{{\mathsf{0}}\}$. Then, for $f\in {\mathsf{T}}_{\nu}(M)\backslash \{{\mathsf{0}}\}$ a parameterized curve $\hat{\gamma}=(\varrho_t)\subset  {\mathcal S}^\Phi(M)$ passing through $\varrho_0=\nu$ with $\varrho_t^{\,\prime}|_{t=0}=f$ and a $\hat{\gamma}$-compliant unital $^*$-representation around $\nu$ exist if, and only if, $f\circ\Phi=f$.
 \end{enumerate}
\end{corolla}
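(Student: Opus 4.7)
My plan is to deduce (a) by lifting the given $\gamma$-compliant representation of $N$ to a representation of $M$ via a Stinespring-type dilation of the completely positive map $\pi\circ\Phi$, and then to derive (b) by combining (a) with Theorem \ref{Cstern}/Corollary \ref{tanfolg} applied to the restriction $f|N$, which by Lemma \ref{condexpect} lies in $\mathsf{T}_{\nu|N}(N)$. The map $\mathcal{S}(N)\ni\mu\mapsto\mu\circ\Phi\in\mathcal{S}^\Phi(M)$ is an affine isometric bijection (continuous and injective because $\Phi$ is a surjective norm-one projection with $\Phi^2=\Phi$), so $\gamma\circ\Phi$ is automatically a continuous simple curve through $\nu$ whenever $\gamma$ is a curve through $\nu|N$.

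For (a), I would observe that $\pi\circ\Phi\colon M\to {\mathsf B}(\mathcal{H}_\pi)$ is unital and completely positive (since $\Phi$ is completely positive by \cite{Take:79} and $\pi$ is a $^*$-homomorphism). Apply Stinespring dilation to obtain a Hilbert space $\tilde{\mathcal{H}}$, an isometry $V\colon\mathcal{H}_\pi\to\tilde{\mathcal{H}}$, and a unital $^*$-representation $\tilde{\pi}\colon M\to {\mathsf B}(\tilde{\mathcal{H}})$ with $\pi(\Phi(x))=V^*\tilde{\pi}(x)V$ for all $x\in M$. Then for the differentiable local implementation $I_\pi\ni t\mapsto\varphi_t\in \mathcal{S}_{\pi,N}(\nu_t)$ set $\tilde{\varphi}_t=V\varphi_t$. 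A direct calculation gives
\[
\langle\tilde{\pi}(x)\tilde{\varphi}_t,\tilde{\varphi}_t\rangle_{\tilde{\pi}}=\langle\pi(\Phi(x))\varphi_t,\varphi_t\rangle_\pi=(\nu_t\circ\Phi)(x),
\]
so $\tilde{\varphi}_t\in\mathcal{S}_{\tilde{\pi},M}(\nu_t\circ\Phi)$, and boundedness of $V$ transfers the differentiability of $(\varphi_t)$ at $t=0$ to $(\tilde{\varphi}_t)$. Hence $\tilde{\pi}$ is $(\gamma\circ\Phi)$-compliant around $\nu$.

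For (b), necessity is essentially automatic: if $(\varrho_t)\subset \mathcal{S}^\Phi(M)$ with $\varrho_0=\nu$ and $\varrho_t^{\,\prime}|_{t=0}=f$, then $\varrho_t=\varrho_t\circ\Phi$ for every $t$, and differentiation at $t=0$ in the functional norm yields $f=f\circ\Phi$. For sufficiency, assume $f\circ\Phi=f$, so in particular $f=(f|N)\circ\Phi$. By Lemma \ref{condexpect}, $f|N\in \mathsf{T}_{\nu|N}(N)\setminus\{{\mathsf 0}\}$ (nonzero because $f\not={\mathsf 0}$ and $f=(f|N)\circ\Phi$). Apply Corollary \ref{tanfolg} to $f|N$ to produce a parameterized curve $\gamma=(\nu_t)\subset \mathcal{S}(N)$ passing through $\nu|N$ with tangent form $f|N$ and a $\gamma$-compliant representation, and then invoke part (a) to obtain $\hat{\gamma}=\gamma\circ\Phi\subset\mathcal{S}^\Phi(M)$, which passes through $\nu$ with tangent form $(f|N)\circ\Phi=f$ and admits a compliant representation of $M$.

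The only genuinely non-routine step is the Stinespring dilation argument in (a), but it is standard once one notices that $\pi\circ\Phi$ is completely positive; every other ingredient (injectivity of $\mu\mapsto\mu\circ\Phi$, transfer of differentiability through the bounded $V$, and reduction of sufficiency in (b) to the already established tangent-form existence theorem) is routine bookkeeping.
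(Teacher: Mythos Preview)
Your proposal is correct and follows essentially the same route as the paper: part (a) via Stinespring dilation of the unital completely positive map $\pi\circ\Phi$ and transport of the implementation through the isometry $V$, and part (b) via differentiation of $\varrho_t=\varrho_t\circ\Phi$ for necessity, and Lemma \ref{condexpect} together with Corollary \ref{tanfolg} and part (a) for sufficiency. The paper's argument is the same in all essential respects.
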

\begin{proof}
By supposition, we have $\nu_t(x)=\langle \pi(x)\varphi_t,\varphi_t\rangle_\pi$, for all $x\in N$, with $\nu=\nu_0$ and $t\mapsto \varphi_t\in {\mathcal S}_{\pi,N}(\nu_t)$ being a differentiable at $t=0$ implementation of $\gamma$. Now, since $\pi$ as a unital $^*$-representation is a unital completely positive linear map acting from $N$ into ${\mathsf{B}}({\mathcal H}_\pi)$, and $\Phi$ as a map from $M$ into $N$ is unital and completely positive, then the composed mapping $\pi\circ \Phi: M\rightarrow {\mathsf{B}}({\mathcal H}_\pi)$ has to be a unital, completely positive linear mapping, too. Hence, by Stinespring's theorem \cite{stine:55} there exists a unital $^*$-representation $\{\pi_1,{\mathcal H}_1\}$ of $M$ and an isometry $V: {\mathcal H}_\pi \rightarrow {\mathcal H}_1$ such that
$
\pi(\Phi(x))=V^*\pi_1(x)V,\,\forall x\in M
$ is fulfilled. From this it follows that   $t\mapsto V\varphi_t\in {\mathcal S}_{\pi_1,M}(\nu_t\circ\Phi)$ is a differentiable at $t=0$ implementation of the parameterized family $\gamma\circ\Phi=(\nu_t\circ\Phi)$ of states over $M$. Clearly, due to $\nu_t\circ\Phi|N=\nu_t$ for all $t$ and since $\gamma$ as a parameterized curve passing through $\nu|N$ is injective around $t=0$, it follows that $\gamma\circ\Phi$ has to be injective around $t=0$, hence is a parameterized curve passing through $\nu$. Also, in view of the above $\{\pi_1,{\mathcal H}_1\}$ is a $\gamma\circ\Phi$-compliant unital $^*$-representation of $M$ around $\nu$. Finally, in view of \eqref{condinv} the projection property $\Phi^2=\Phi$ implies that $\gamma\circ\Phi\subset {\mathcal S}^\Phi(M)$ is satisfied. Thus \eqref{condexpect0a} is seen.

Relating \eqref{condexpect0b}, to see sufficiency, suppose  $ {\mathsf{T}}_{\nu}(M)\not=\{{\mathsf{0}}\}$ and $f\in {\mathsf{T}}_{\nu}(M)\backslash \{{\mathsf{0}}\}$ with $f\circ \Phi=f$. Then, $f\not={\mathsf{0}}$ and $f|N\not={\mathsf{0}}$ follow. Hence, by Lemma \ref{condexpect}, for $g=f|N$ we have $g\in {\mathsf{T}}_{\nu|N}(N)\backslash \{{\mathsf{0}}\}$ and $g\circ\Phi=f$. By Corollary \ref{tanfolg}\,\eqref{tanfolg02},   formula \eqref{tanfolg0b} when applied at $\nu|N$ in context of $N$, ${\mathcal S}(N)$ and $g$ instead of $M$, ${\mathcal S}(M)$ and $f$, respectively,  in accordance with the suppositions will yield a parameterized curve $\gamma=(\nu_t)\subset {\mathcal S}(N)$ passing through $\nu|N$ and obeying $\nu_t^{\,\prime}|_{t=0}=g$. By \eqref{condexpect0a} the curve $\hat{\gamma}=\gamma\circ\Phi$ can be taken and with $\varrho_t=\nu_t\circ\Phi$ is obeying $\varrho_t^{\,\prime}|_{t=0}=g\circ \Phi=f$.

To see necessity of \eqref{condexpect0b}, note that each curve $\hat{\gamma}=(\varrho_t)\subset {\mathcal S}^\Phi(M)$ passing through $\varrho_0=\nu$ and admitting a $\hat{\gamma}$-compliant unital $^*$-representation around $\nu$ has to obey $\varrho_t^{\,\prime}|_{t=0}=f$, for some $f\in {\mathsf{T}}_\nu(M)$. Also, due to $\varrho_t\in{\mathcal S}^\Phi(M)$ there is $\nu_t\in {\mathcal S}(N)$ such that $\varrho_t=\nu_t\circ\Phi$. From this $\nu_t=\varrho_t|N$ and  $\nu_t^{\,\prime}|_{t=0}=f|N$ are seen.
Due to  $N=\Phi(M)$ then
$
f=\varrho_t^{\,\prime}|_{t=0}=(\varrho_t|N)\circ\Phi^{\,\prime}|_{t=0}=(\varrho_t|N)^{\,\prime}\circ\Phi|_{t=0}=\nu_t^{\,\prime}|_{t=0}\circ\Phi=f|N\circ\Phi=f\circ\Phi
$ follows. Thus, if $\hat{\gamma}=(\varrho_t)\subset {\mathcal S}^\Phi(M)$ holds with   $\varrho_t^{\,\prime}|_{t=0}=f$, this implies $f\in {\mathsf{T}}_\nu(M)$ with $f=f\circ\Phi$.
\end{proof}
\subsubsection{When is the tangent space trivial?}\label{trivtsp}
By Lemma \ref{extriv},  at a  mixed state the tangent space is non-trivial. Thus, if occurring at all, a trivial tangent space requires the state to be extremal.  More precisely, for a trivial tangent space at $\nu$,  the state in addition has to be a character state of the unital ${\mathsf C}^*$-algebra $M$, i.e.~
\begin{equation}\label{char}
\nu(a b)=\nu(a)\nu(b),\, \forall a,b\in M
\end{equation}
has to be satisfied. Remind some of the canonical properties of character states.
\begin{lemma}\label{char1}
On a unital ${\mathsf C}^*$-algebra $M$, $\nu \in{\mathcal S}(M)$ is a character state if, and only if, ${\mathsf{dim}}\,{\mathcal H}_\pi=1$ holds, for the cyclic $^*$-representation  $\{\pi,{\mathcal H}_\pi\}$  induced by $\nu$. Thus especially, if character states over $M$ exist these are special pure states.
\end{lemma}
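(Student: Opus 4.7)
The plan is to verify both implications via the GNS data, then read off purity as a corollary. Let $\{\pi,{\mathcal H}_\pi\}$ be the cyclic $^*$-representation induced by $\nu$, with cyclic vector $\xi\in {\mathcal S}_{\pi,M}(\nu)$, so that $\nu(x)=\langle \pi(x)\xi,\xi\rangle_\pi$ and $[\pi(M)\xi]={\mathcal H}_\pi$.

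First I would handle the easy direction. Assume ${\mathsf{dim}}\,{\mathcal H}_\pi=1$, so ${\mathcal H}_\pi={\mathbb C}\xi$. Then, for every $a\in M$, $\pi(a)\xi=\langle\pi(a)\xi,\xi\rangle_\pi\,\xi=\nu(a)\,\xi$. Hence, for any $a,b\in M$,
\[
\nu(ab)=\langle\pi(a)\pi(b)\xi,\xi\rangle_\pi=\nu(b)\,\langle\pi(a)\xi,\xi\rangle_\pi=\nu(a)\nu(b),
\]
which is \eqref{char}.

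The converse is the main content. Assume $\nu$ obeys \eqref{char}. Using hermiticity $\nu(x^*)=\overline{\nu(x)}$ (since $\nu$ is a state) and multiplicativity, $\nu(x^*x)=\nu(x^*)\nu(x)=|\nu(x)|^2$ for every $x\in M$. Consequently,
\[
\|\pi(x)\xi\|_\pi^{\,2}=\nu(x^*x)=|\nu(x)|^2=|\langle\pi(x)\xi,\xi\rangle_\pi|^2.
\]
Since $\|\xi\|_\pi=1$, the Cauchy--Schwarz inequality $|\langle\pi(x)\xi,\xi\rangle_\pi|\leq \|\pi(x)\xi\|_\pi$ holds with equality, so $\pi(x)\xi$ is a scalar multiple of $\xi$, namely $\pi(x)\xi=\nu(x)\,\xi$. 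As $x$ ranges over $M$ this yields $\pi(M)\xi\subset {\mathbb C}\xi$; by cyclicity ${\mathcal H}_\pi=[\pi(M)\xi]={\mathbb C}\xi$, i.e.\ ${\mathsf{dim}}\,{\mathcal H}_\pi=1$.

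The addendum about purity is then immediate: if ${\mathsf{dim}}\,{\mathcal H}_\pi=1$, then $\pi(M)^{\,\prime\prime}={\mathsf B}({\mathcal H}_\pi)={\mathbb C}\,{\mathsf 1}$ acts irreducibly on the one-dimensional space, so by the standard GNS/irreducibility criterion $\nu$ is extremal in ${\mathcal S}(M)$, i.e.\ pure. No step here poses a serious obstacle; the only mildly delicate point is the equality-case argument in Cauchy--Schwarz, which hinges on $\xi$ being a unit vector and on hermiticity of $\nu$, both supplied automatically by the GNS setup.
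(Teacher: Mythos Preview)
Your proof is correct and follows essentially the same route as the paper: both directions use the GNS cyclic vector, the converse hinges on $\nu(x^*x)=|\nu(x)|^2$ together with the equality case of Cauchy--Schwarz to force $\pi(x)\xi\in{\mathbb C}\xi$, and purity is read off from irreducibility of the one-dimensional GNS representation. There is nothing to add.
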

\begin{proof}
Let $\varphi\in {\mathcal S}_{\pi,M}(\nu)$. Then, if ${\mathsf{dim}}\,{\mathcal H}_\pi=1$ is fulfilled, by cyclicity and owing to $\varphi\in {\mathcal S}_{\pi,M}(\nu)$, we must have that $\pi(x)\varphi=\nu(x)\varphi$, for each $x\in M$. Hence, $\nu(x y)=\nu(x)\nu(y)$ follows, that is, $\nu$ is a character of $M$. On the other hand, if $\nu$ is a character state, then
\[
\|\pi(z)\varphi\|_\pi=\sqrt{\nu(z^*z)}=\sqrt{\overline{\nu(z)}\,\nu(z)}=|\nu(z)|=|\langle\pi(z)\varphi,\varphi\rangle_\pi|
\]
for each $z\in M$. This is only possible if $\pi(z)\varphi$ is a scalar multiple of $\varphi$, for each $z\in M$, by elementary properties of the Cauchy-Schwarz estimate.  Thus, by cyclicity ${\mathsf{dim}}\,{\mathcal H}_\pi=1$ follows. In  summarizing, property \eqref{char} proves equivalent to the condition that  ${\mathsf{dim}}\,{\mathcal H}_\pi=1$ be fulfilled. If the latter condition is fulfilled however, this in a trivial way  implies the condition  $N=\pi(M)^{\,\prime\prime}={\mathsf B}({\mathcal H}_\pi)$ to be fulfilled. Hence, the cyclic $^*$-representation of $\nu$ is acting  over ${\mathcal H}_\pi$ in an  irreducible way. By standard theory then $\nu\in {\mathsf{ex}}\,{\mathcal S}(M)$ must hold.
\end{proof}
\begin{lemma}\label{char1/2}
Suppose $\nu$ is a normal character state over a ${\mathsf W}^*$-algebra $M$. Then, the support orthoprojection $z=s(\nu)$ of $\nu$ is a minimal orthoprojection and
\begin{enumerate}
\item \label{char1/2a}
$x z=z x$;
\item \label{char1/2b}
$x z=z x z=\nu(x) z$.
\end{enumerate}
are fulfilled, for each $x\in M$.
\end{lemma}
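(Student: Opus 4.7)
The plan is to exploit the character identity $\nu(ab)=\nu(a)\nu(b)$ together with the standard support property of a normal positive form, namely that $\nu(y^*y)=0$ forces $yz=0$ (equivalently $zy^*yz=0$), in order to reduce everything to a single short computation.

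First, I would dispose of the normalization: since $\nu({\mathsf 1})=\nu(z)+\nu(z^\perp)=1$ and the character identity gives $\nu(z^\perp)=\nu(z^\perp\cdot z^\perp)=\nu(z^\perp)^2$, the value $\nu(z^\perp)\in\{0,1\}$, and because $z$ is the support one has $\nu(z)=1$, $\nu(z^\perp)=0$. Next, for an arbitrary $x\in M$ I would form the element $y=xz-\nu(x)z$ and compute, using $z^2=z$, the support property $\nu((\cdot))=\nu(z(\cdot)z)$, and the character identity:
\begin{equation*}
\nu(y^*y)=\nu(zx^*xz)-\nu(x)\nu(zx^*z)-\overline{\nu(x)}\,\nu(zxz)+|\nu(x)|^2\nu(z)
=|\nu(x)|^2-|\nu(x)|^2-|\nu(x)|^2+|\nu(x)|^2=0.
\end{equation*}
The standard support fact then yields $yz=0$; but $yz=xz-\nu(x)z=y$, so $y=0$, which is exactly (b) in the form $xz=\nu(x)z$, and then $zxz=z\cdot\nu(x)z=\nu(x)z$. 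Taking adjoints and using $\overline{\nu(x^*)}=\nu(x)$ (positivity of $\nu$) gives the mirror identity $zx=\nu(x)z$, hence (a) and the full statement of (b).

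Finally, for minimality of $z$: let $p\in M$ be an orthoprojection with $p\leq z$. Then $p=pz=\nu(p)z$ by (b); squaring and using $p^2=p$, $z^2=z$ yields $\nu(p)^2=\nu(p)$, so $\nu(p)\in\{0,1\}$ and correspondingly $p=0$ or $p=z$. This shows $z$ is minimal.

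I do not expect a real obstacle; the only delicate point is to cite cleanly that $\nu(y^*y)=0$ implies $zy^*yz=0$ and hence $yz=0$, which is the routine support characterization for normal forms on a ${\mathsf W}^*$-algebra. Everything else is an application of the character identity to four terms that cancel in pairs.
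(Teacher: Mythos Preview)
Your proof is correct and somewhat more economical than the paper's. The paper proceeds in three separate steps: it first computes $\nu(zx^*z^\perp xz)=0$ via the character identity to obtain $z^\perp xz=0$, hence (a); it then proves minimality of $z$ by a separate argument with the auxiliary state $\varrho(x)=\nu(qxq)/\nu(q)$ for $0<q\leq z$; and only after minimality is in hand does it deduce (b) from the fact that $zxz$ must be a scalar multiple of $z$. Your route inverts this logic: by computing $\nu(y^*y)=0$ for $y=xz-\nu(x)z$ you obtain the strongest statement (b) directly, and then (a) and minimality drop out as immediate corollaries of (b). The key device in both arguments is the same---the character identity applied to a product involving $z$ together with faithfulness of $\nu$ on $zMz$---but your packaging avoids the detour through minimality and the auxiliary state. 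One small remark: since $y=yz$ and $y^*=zy^*$, you in fact have $y^*y=zy^*yz\in zMz$, so $\nu(y^*y)=0$ together with faithfulness of $\nu|_{zMz}$ gives $y^*y=0$ and hence $y=0$ in one stroke, without even invoking ``$yz=0$'' as an intermediate step.
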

\begin{proof}
For a normal state $\nu$, the support $z=s(\nu)$ is an orthoprojection in $M$. By the meaning of $z$ we have $\nu(z^\perp)=0$, and since $\nu$ is a  character state, for each $x\in M$ we have $\nu(zx^*z^\perp xz)=\nu(z)\nu(x^*)\nu(z^\perp)\nu(x^*)\nu(z)=0$. Hence, $zx^*z^\perp xz={\mathsf{0}}$, or equivalently  $z^\perp xz={\mathsf{0}}$ follows. For hermitian $x\in M$ from this $xz=z x z=(z x z)^*=z x$ is following.  But in a ${\mathsf C}^*$-algebraic situation, this yields commutation of $z$ with each $x\in M$. This is  \eqref{char1/2a}. We are going to show that $z$ is a minimal orthoprojection of $M$. For an orthoprojection $q\in M$ obeying  ${\mathsf{0}}<q\leq z$ we have $0<\nu(q)\leq 1$. Thus, since $\nu$ is a character state, for the normal state $\varrho$  defined by $$\varrho(x)=\frac{\nu(q x q)}{\nu(q)}=\nu(q)^2 \frac{\nu(x)}{\nu(q)}=\nu(q)\nu(x)$$
for all $x\in M$ we infer that $1=\varrho({\mathsf{1}})=\nu(q)\nu({\mathsf{1}})=\nu(q)$. Hence, $q\geq z$ has to hold. In view of the assumption made on $q$ then $q=z$ follows. That is, $z$ is a minimal orthoprojection in $M$. But then, for hermitian $x$ we infer that $z x z=\lambda z$, for some $\lambda\in {\mathbb R}$. Since $\nu(x)=\nu(z x z)=\lambda \nu(z)=\lambda$ has to be fulfilled, $z x z=\nu(x) z$ follows, for hermitian $x\in M$. In a ${\mathsf{C}}^*$-algebraic situation according to this and \eqref{char1/2a} then  \eqref{char1/2b} can be followed, for all $x\in M$.
\end{proof}
\begin{lemma}\label{char2}
  Let $M$ be a unital ${\mathsf C}^*$-algebra admitting a character state $\nu$. Then, $\nu$ and $\varrho$ are mutually disjoint states, for each $\varrho\in {\mathsf{ex}}\,{\mathcal S}(M)$ with $\nu\not=\varrho$.
  \end{lemma}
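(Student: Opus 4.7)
The plan is to pick an arbitrary unital $^*$-representation $\{\pi,{\mathcal H}_\pi\}$ of $M$ in which both $\pi$-fibres of $\nu$ and $\varrho$ are non-trivial, to choose implementing vectors $\varphi\in{\mathcal S}_{\pi,M}(\nu)$ and $\xi\in{\mathcal S}_{\pi,M}(\varrho)$, and then to show inside $N=\pi(M)^{\,\prime\prime}$ that $c_\pi[p_\pi(\varphi)]\cdot c_\pi[p_\pi(\xi)]={\mathsf 0}$, which is equivalent to the disjointness asserted. Let $\nu_\pi$ and $\varrho_\pi$ denote the vector states on $N$ implemented by $\varphi$ and $\xi$.

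First I would verify that $\nu_\pi$ is itself a normal character state on the ${\mathsf W}^*$-algebra $N$. Multiplicativity of $\nu_\pi$ on $\pi(M)$ is immediate from $\nu=\nu_\pi\circ\pi$ together with the character property of $\nu$ on $M$, and Kaplansky's density theorem combined with the normality of $\nu_\pi$ propagates multiplicativity to all of $N$. Lemma \ref{char1/2} then applies to $\nu_\pi$ and tells us that $z:=s(\nu_\pi)=p_\pi(\varphi)$ is both a minimal orthoprojection of $N$ and a central one, and that $xz=\nu_\pi(x)\,z$ holds for every $x\in N$. In particular $c_\pi[p_\pi(\varphi)]=z$ is already central, and $\pi(y)z=\nu(y)\,z$ for every $y\in M$.

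Next I would split $\xi=\xi_1+\xi_2$ with $\xi_1=z\xi$ and $\xi_2=(1-z)\xi$. By centrality of $z$ these two components are orthogonal and live in sectors each invariant under $\pi(M)$. Using $\pi(y)z=\nu(y)z$, a short direct calculation gives, for every $y\in M$,
\[
\varrho(y)=\langle\pi(y)\xi,\xi\rangle_\pi=\alpha\,\nu(y)+\langle\pi(y)\xi_2,\xi_2\rangle_\pi,
\]
where $\alpha=\|\xi_1\|_\pi^2\in[0,1]$, and $y\mapsto\langle\pi(y)\xi_2,\xi_2\rangle_\pi$ is a positive linear form of total mass $1-\alpha$.

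The decisive and only subtle step is to show $\alpha=0$. If $\alpha=1$ then $\xi\in z{\mathcal H}_\pi$, where the scalar action $\pi(y)z=\nu(y)z$ forces $\xi$ to implement $\nu$, so $\varrho=\nu$, contradicting $\varrho\neq\nu$. If $0<\alpha<1$, normalising $\xi_2$ yields a state $\mu\in{\mathcal S}(M)$ with $\varrho=\alpha\,\nu+(1-\alpha)\,\mu$, and extremality of $\varrho$ forces $\nu=\mu=\varrho$, again contradicting $\nu\neq\varrho$. Hence $\alpha=0$, i.e.\,$z\xi=0$, so $\xi\in(1-z){\mathcal H}_\pi$. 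Since $1-z$ is central in $N$, for each $w\in\pi(M)^{\,\prime}$ we get $w\xi=(1-z)w\xi\in(1-z){\mathcal H}_\pi$, so $p_\pi(\xi)\leq 1-z$ and therefore $c_\pi[p_\pi(\xi)]\leq 1-z=c_\pi[p_\pi(\varphi)]^\perp$. This yields $c_\pi(\varphi)\cdot c_\pi(\xi)={\mathsf 0}$, the desired disjointness. The main obstacle is conceptual rather than technical: one must recognise that centrality of $z=p_\pi(\varphi)$, coupled with the scalar action it carries, is precisely what converts the orthogonal decomposition of $\xi$ into an honest affine decomposition of $\varrho$, at which point extremality of $\varrho$ closes the argument.
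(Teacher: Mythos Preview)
Your proof is correct and follows essentially the same route as the paper's own argument: both lift the character property of $\nu$ to $\nu_\pi$ on $N=\pi(M)^{\prime\prime}$ via Kaplansky density, invoke Lemma \ref{char1/2} to obtain the central minimal projection $z=s(\nu_\pi)=p_\pi(\varphi)$ with its scalar action $\pi(y)z=\nu(y)z$, split the implementing vector of $\varrho$ along $z$ and $z^\perp$, and then eliminate the cases $\alpha=1$ and $0<\alpha<1$ by purity of $\varrho$ together with $\nu\neq\varrho$. The only cosmetic difference is that the paper first records the inequality $\varrho(x^*x)\geq\|z\eta\|_\pi^2\,\nu(x^*x)$ before extracting the convex decomposition, whereas you write the affine splitting of $\varrho$ directly; the substance is the same.
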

  \begin{proof}
   Suppose $\varrho\in {\mathsf{ex}}\,{\mathcal S}(M)$ with $\nu\not=\varrho$, and be
  $\{\pi,{\mathcal H}_\pi\}$ a unital $^*$-representation of $M$ such that the $\pi$-fibres of both $\nu$ and $\varrho$ are non-void. Let $N=\pi(M)^{\,\prime\prime}$. The claim is that a central orthoprojection $z\in N\cap N^{\,\prime}$ exists such that both the relations
  \begin{equation}\label{chardis}
  z {\mathcal S}_{\pi,M}(\nu)={\mathcal S}_{\pi,M}(\nu),\ z^\perp {\mathcal S}_{\pi,M}(\varrho)={\mathcal S}_{\pi,M}(\varrho)
  \end{equation}
  are fulfilled. We are going to construct the orthoprojection $z$ now.
  Let $\varphi\in {\mathcal S}_{\pi,M}(\nu)$.  Consider the vector state $\nu_\pi$ defined on $N$ by $\nu_\pi(x)=\langle x\varphi,\varphi\rangle_\pi$, for all $x\in N$. Since according to \eqref{char} for the normal state $\nu_\pi$ we have $\nu_\pi(\pi(a)\pi(b))=\nu(ab)=\nu(a)\nu(b)=\nu_\pi(\pi(a)) \nu_\pi(\pi(b))$, for any $a,b\in M$, $\nu_\pi$ behaves multiplicatively over the strongly dense $^*$-subalgebra $\pi(M)$ of $N$. Since operator multiplication is continuous on bounded spheres, by a Kaplansky density argument $\nu_
  \pi$ is seen to be a normal character state on the $vN$-algebra $N$. Thus, Lemma \ref{char1/2} can be applied with $N$ (instead of $M$).  Let $z=s(\nu_\pi)$ be the support orthoprojection of $\nu_\pi$ within $N$. According to Lemma \ref{char1/2}\,\eqref{char1/2a} we have $z\in N\cap N^{\,\prime}$, and owing to  $0=\nu_\pi(z^\perp)=\|z^\perp\varphi\|_\pi$ we have  $z\varphi=\varphi$. From this the first of the relations of \eqref{chardis} follows. On the other hand, Lemma \ref{char1/2}\,\eqref{char1/2b} implies $z$ to be minimal in $N$, that is, $N=\mathbb{ C}z+Nz^\perp$ holds. In line with this we infer $$\pi(x)=\pi(x)z+\pi(x)z^\perp=\nu(x)z+\pi(x)z^\perp$$ for all $x\in M$. Accordingly, for each $\eta\in {\mathcal S}_{\pi,M}(\varrho)$ and each $x\in M$ the conclusion is
 \[
 \varrho(x^*x)=\langle\pi(x^*x)\eta,\eta\rangle_\pi=\| z\eta\|_\pi^2 \nu(x^*x)+\| z^\perp\pi(x)\eta\|_\pi^2 \geq \| z\eta\|_\pi^2 \nu(x^*x)
 \]
Thereby, the case $\| z\eta\|_\pi^2=1$ cannot occur since then we had  $\nu(x^*x)\leq \varrho(x^*x)$, for all $x\in M$. For states the latter would imply
  $\nu(x^*x)=\varrho(x^*x)$, i.e.~we had $\nu=\varrho$, a contradiction. On the other hand, for $0<\| z\eta\|_\pi^2<1$  from the above estimate
  \[
  \varrho=\| z\eta\|_\pi^2\nu+\bigl(1-\| z\eta\|_\pi^2\bigr)\,\omega
  \]
  had to be followed,
  with the state
  $
  \omega=\bigl(1-\| z\eta\|_\pi^2\bigr)^{-1}\bigl(\varrho-\| z\eta\|_\pi^2\nu\bigr)
  $. Since $\varrho$ is pure by assumption, from the previous we had $\nu=\omega=\varrho$, which were contradicting to the other assumption $\nu\not=\varrho$ again.
  Hence, $z\eta={\mathsf{0}}$ has to hold,  for each $\eta\in {\mathcal S}_{\pi,M}(\varrho)$.
  \end{proof}
\begin{remark}\label{char2rem}
Let $\{\pi,{\mathcal H}_\pi\}$ be a unital $^*$-representation of $M$ such that the $\pi$-fibres of both $\nu$ and $\varrho$ are non-void. Then, if  $\nu_\pi$ is the normal state implemented by each $\varphi\in {\mathcal S}_{\pi,M}(\nu)$ over $N=\pi(M)^{\,\prime\prime}$, according to the previous proof in the disjointness condition \eqref{chardis}, $z\in N\cap N^{\,\prime}$ is the support orthoprojection $z=s(\nu_\pi)$.
\end{remark}
\begin{corolla}\label{extan}
Let $M$ be a unital ${\mathsf C}^*$-algebra, and $\nu \in{\mathcal S}(M)$. Then, ${\mathsf{T}}_\nu(M)=\{{\mathsf 0}\}$ if, and only if, $\nu$ is a character state of $M$.
Especially, ${\mathsf{T}}_\nu(M)=\{{\mathsf 0}\}$ is fulfilled at each $\nu \in {\mathsf{ex}}\,{\mathcal S}(M)$ if, and only if, $M$ is   commutative.
\end{corolla}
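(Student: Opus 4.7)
The proof splits into two equivalences. First I would establish the main statement that $\mathsf{T}_\nu(M)=\{\mathsf 0\}$ is equivalent to $\nu$ being a character state. For the ``if'' direction, assume $\nu$ satisfies \eqref{char}. By Lemma \ref{char1}, the GNS representation $\{\pi,\mathcal H_\pi\}$ of $\nu$ has $\mathsf{dim}\,\mathcal H_\pi=1$, so with cyclic $\varphi\in\mathcal S_{\pi,M}(\nu)$ one has $[\pi(M)_{\mathsf h}\varphi]=\mathbb R\varphi$. Now suppose $f\in\mathsf{T}_\nu(M)$; by Theorem \ref{Cstern}\,\eqref{Cstern2}, applied in the GNS representation, there exists a (necessarily real scalar) $\hat\psi_0=\lambda\varphi\in[\pi(M)_{\mathsf h}\varphi]$ with $\Re\langle\hat\psi_0,\varphi\rangle_\pi=0$. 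This forces $\lambda=0$, so $f={\mathsf 0}$.

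For the ``only if'' direction, assume $\mathsf{T}_\nu(M)=\{\mathsf 0\}$. Lemma \ref{extriv} shows $\nu$ cannot be mixed, so $\nu\in\mathsf{ex}\,\mathcal S(M)$ and the GNS representation $\{\pi,\mathcal H_\pi\}$ acts irreducibly, i.e.~$\pi(M)^{\,\prime\prime}=\mathsf B(\mathcal H_\pi)$. Suppose, for contradiction, that $\mathsf{dim}\,\mathcal H_\pi>1$. The key sub-step is to find a non-zero $\hat\psi_0\in[\pi(M)_{\mathsf h}\varphi]$ with $\Re\langle\hat\psi_0,\varphi\rangle_\pi=0$; then Theorem \ref{Cstern}\,\eqref{Cstern2}$\Rightarrow$\eqref{Cstern1} (together with the norm identity $\|f\|_\nu=\|\hat\psi_0\|_\pi$) produces a non-zero tangent form and yields the contradiction. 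To find such $\hat\psi_0$, I would observe that if $[\pi(M)_{\mathsf h}\varphi]\subseteq\mathbb R\varphi$, then every hermitian $x\in\pi(M)$ would map $\varphi$ into $\mathbb R\varphi$, hence by complex-linear decomposition every element of $\pi(M)$ would preserve $\mathbb C\varphi$; cyclicity of $\varphi$ would then force $\mathcal H_\pi=\mathbb C\varphi$, contradicting $\mathsf{dim}\,\mathcal H_\pi>1$. Thus pick any $\hat\psi\in[\pi(M)_{\mathsf h}\varphi]\setminus\mathbb R\varphi$ and set $\hat\psi_0=\hat\psi-\Re\langle\hat\psi,\varphi\rangle_\pi\cdot\varphi$, which lies in $[\pi(M)_{\mathsf h}\varphi]$ (using ${\mathsf 1}\in M$) and is non-zero.

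For the appendix statement, the ``only if'' direction is immediate: if $M$ is commutative, then by Gelfand--Naimark every pure state is a character, and the main equivalence gives $\mathsf{T}_\nu(M)=\{\mathsf 0\}$ for all $\nu\in\mathsf{ex}\,\mathcal S(M)$. Conversely, if $\mathsf{T}_\nu(M)=\{\mathsf 0\}$ for all pure $\nu$, the main equivalence forces every pure state to be a character, so $\nu(ab-ba)=\nu(a)\nu(b)-\nu(b)\nu(a)=0$ for all $a,b\in M$ and all pure $\nu$; since pure states separate points of $M$ (by Krein--Milman together with the fact that states separate points of a unital $\mathsf C^*$-algebra), $ab=ba$, so $M$ is commutative. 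The main obstacle, and really the only non-routine point, is establishing the reduction $[\pi(M)_{\mathsf h}\varphi]\not\subseteq\mathbb R\varphi$ when $\dim\mathcal H_\pi>1$; once that is in hand, everything else is a direct invocation of Theorem \ref{Cstern}, Lemma \ref{extriv}, and Lemma \ref{char1}.
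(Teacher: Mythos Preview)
Your proof is correct and follows the paper's overall line, with one substantive variation worth noting.

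One citation issue: in the ``if'' direction, Theorem \ref{Cstern}\,\eqref{Cstern2} only asserts the decomposition \emph{for some} $^*$-representation; to work specifically in the GNS representation you need Corollary \ref{tanfolg}\,\eqref{tanfolg01}, which the paper invokes for exactly this purpose. Once that is in place your argument matches the paper's verbatim.

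For the extremal non-character case the paper takes a more explicit route than you do. Instead of arguing abstractly that $[\pi(M)_{\mathsf h}\varphi]\supsetneq\mathbb R\varphi$ via cyclicity, it picks a projection $p\in\mathsf B(\mathcal H_\pi)$ with $0<\nu_\pi(p)<1$ (possible since $\dim\mathcal H_\pi>1$), sets $a=p-\tfrac{\nu_\pi(p)}{\nu_\pi(p^\perp)}p^\perp$, defines $f_\pi=(\nu_\pi)_a$, and computes $\|f_\pi\|_{\nu_\pi}=\tfrac12\sqrt{\nu_\pi(a^2)}\neq 0$ using Remark \ref{scalar} and Lemma \ref{normformel}; the passage from $\|f_\pi\|_{\nu_\pi}$ to $\|f\|_\nu$ then goes through the density result Corollary \ref{haupt}. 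Your cyclicity argument is slicker and sidesteps Corollary \ref{haupt} entirely by landing directly in the hypothesis of Theorem \ref{Cstern}\,\eqref{Cstern2}; the paper's route, in return, gives the nonzero tangent form and its norm explicitly. The treatment of the commutativity equivalence is the same in both.
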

\begin{proof}
It is useful to refer to the cyclic $^*$-representation  $\{\pi,{\mathcal H}_\pi\}$  induced by the state $\nu$ in question, with $N=\pi(M)^{\,\prime\prime}$ and cyclic implementing vector $\varphi\in {\mathcal S}_{\pi,M}(\nu)$.

Suppose $\nu$ is a character state of $M$. Let $f\in {\mathsf{T}}_\nu(M)$. Application of Corollary \ref{tanfolg} with $\pi_0=\pi$ and $\varphi_0=\varphi$ yields $\hat{\xi}\in [N_{\mathsf h}\varphi]\subset {\mathcal H}_\pi$, with $\|f\|_\nu=\|\hat{\xi}\|_\pi$ and
\[
f(x)=\langle\pi(x)\hat{\xi},\varphi\rangle_\pi+\langle\pi(x)\varphi,\hat{\xi}\rangle_\pi
\]
for any $x\in M$. Now, by Lemma \ref{char1},   ${\mathsf{dim}}\,{\mathcal H}_\pi=1$ holds. Hence,  $\hat{\xi}=\lambda \varphi$, with $\lambda\in {\mathbb R}$. That is, according to the above formula and by choice of $\varphi$, we get  $f=2 \lambda\, \nu$. Thus, $f({\mathsf 1})=2\lambda$ follows. Since $f({\mathsf 1})=0$ holds, for each $f\in {\mathsf{T}}_\nu(M)$, $\lambda=0$ follows. Hence, for a character state $\nu$, $\hat{\xi}=\lambda \varphi={\mathsf{0}}$, and from $\|f\|_\nu=\|\hat{\xi}\|_\pi$ then  $f= {\mathsf 0}$ is obtained, for each $f\in {\mathsf{T}}_\nu(M)$. That is,  ${\mathsf{T}}_\nu(M)=\{{\mathsf{0}}\}$ is fulfilled.

Suppose $\nu$ is not a character state. By Lemma \ref{char1} then  ${\mathsf{dim}}\,{\mathcal H}_\pi>1$ must hold.  We are going to show that then ${\mathsf{T}}_\nu(M)\not=\{{\mathsf{0}}\}$ is fulfilled.
Two cases can appear in this context: either $\nu$ is a mixed state, or $\nu\in {\mathsf{ex}}\,{\mathcal S}(M)$. According to Lemma \ref{extriv} at a mixed state ${\mathsf{T}}_\nu(M)\not=\{{\mathsf{0}}\}$ follows, and we are done.
If $\nu$ is extremal, then the action of $\pi$ on ${\mathcal H}_\pi$ has to be irreducible, that is,  $N=\pi(M)^{\,\prime\prime}={\mathsf B}({\mathcal H}_\pi)$ has to be fulfilled.  In this case let an orthoprojection $p\in {\mathsf B}({\mathcal H}_\pi)$ be chosen such that $$0< \|p\varphi\|_\pi < 1$$
Due to ${\mathsf{dim}}\,{\mathcal H}_\pi>1$ this choice is possible. Let $\nu_\pi$ be the vector state on $N=\pi(M)^{\,\prime\prime}={\mathsf B}({\mathcal H}_\pi)$ implemented by $\varphi$. By the previous this  implies
$
0<\nu_\pi(p)<1
$.
We may consider a hermitian element $a\in N$ defined with the help of $p$ as following
\begin{equation*}
a=p-\frac{\nu_\pi(p)}{\nu_\pi(p^\perp)}\,p^\perp
\end{equation*}
From this we get that
\begin{subequations}\label{psq}
\begin{eqnarray}\label{psqa}
\nu_\pi(a) &=& 0\\
\label{psqb}
\nu_\pi(a^2)& =&\nu_\pi(p)\,\biggl(1+\frac{\nu_\pi(p)}{\nu_\pi(p^\perp)}\biggr)\not=0
\end{eqnarray}
\end{subequations}
With the help of $a$ define a linear form $f_\pi$ over $N$ by
\[
f_\pi(\cdot)=(\nu_\pi)_a(\cdot)=\re\nu_\pi((\cdot)a)
\]
By definition, $f_\pi$ is hermitian and according to \eqref{psqa} is obeying $f_\pi({\mathsf 1})=0$. Also, according to Remark \ref{scalar} and Lemma \ref{normformel}, we have
\[
\|f_\pi\|_{\nu_\pi}=\frac{1}{2}\,\sqrt{\nu_\pi(a^2)}
\]
In view of Definition \ref{trdef}, $f_\pi\in {\mathsf{T}}_{\nu_\pi}(N)$ holds, with $f_\pi\not={\mathsf 0}$ due to \eqref{psqb} and by the previous formula. Define $f\in M_{\mathsf h}^*$ by $f(\cdot)=f_\pi(\pi(\cdot))$. Then, Corollary \ref{haupt}, see especially \eqref{densend}, when applied in respect of ${\mathfrak A}=\pi(M)$, $f_\pi$, $\nu_\pi$ provides that
\begin{equation*}
\|f\|_\nu=\|f_\pi|{\pi(M)}\|_{{\nu_\pi}|{\pi(M)}}=\|f_\pi\|_{\nu_\pi}\not=0
\end{equation*}
Hence, we have $f\not={\mathsf 0}$ belonging to ${\mathsf{T}}_{\nu}(M)$.
This proves that ${\mathsf{T}}_\nu(M)\not=\{{\mathsf 0}\}$ also in case that $\nu$ is an extremal, but non-character state. Thus, in summarizing,  ${\mathsf{T}}_\nu(M)=\{{\mathsf 0}\}$ holds if, and only if, $\nu$ is a character state on $M$.
Finally, note that for a commutative $M$ and pure state $\nu$, the correspondent cyclic $^*$-representation by extremality  has to obey  $\pi(M)^{\,\prime\prime}={\mathsf B}({\mathcal H}_\pi)$, but due to  commutativity of $M$ has to be commutative, too. Hence, ${\mathsf{dim}}\,{\mathcal H}_\pi=1$ has to hold, for each pure state.  By Lemma \ref{char1}, the state $\nu$ has to be a character, which by the just proven implies ${\mathsf{T}}_\nu(M)=\{{\mathsf{0}}\}$, at each pure state $\nu$.  On the other hand, suppose ${\mathsf{T}}_\nu(M)=\{{\mathsf{0}}\}$, at each $\nu\in {\mathsf{ex}}\,{\mathcal{S}}(M)$. By the just proven each pure state has to be a character state, that is behaves   multiplicatively. Since the set of states is separating the elements of $M$, for each $x\in M\backslash \{{\mathsf{0}}\}$  there is $\nu\in {\mathsf{ex}}\,{\mathcal S}(M)$ such that $\nu(x)\not=0$. But then, since for $x,y\in M$ we have  $\nu(xy-yx)=\nu(x)\nu(y)-\nu(y)\nu(x)=0$, for any $\nu\in {\mathsf{ex}}\,{\mathcal S}(M)$, conclude that  $xy-yx={\mathsf 0}$ must hold, for any choice of $x,y\in M$. Hence, provided ${\mathsf{T}}_\nu(M)=\{\mathsf 0\}$ holds at  each pure state $\nu$, commutativity of $M$ follows.
\end{proof}
\subsection{Characterizing local tangent spaces}\label{finaltan}
Let $M$ be a unital ${\mathsf C}^*$-algebra with state space ${\mathcal S}(M)$. The aim will be to introduce notions for `submanifold' (of states) and `local tangent space' (at a state of a submanifold) which are suitable in context of Bures geometry. As it will turn out, in order to be applicable to the full state space, the notion as a tacit assumption requires the underlying unital  ${\mathsf{C}}^*$-algebra $M$ to be restricted by the condition ${\mathsf{dim}}\,M\geq 3$. It is easy to see that the latter condition is equivalent to the condition $\#\, {\mathsf{ex}}\,{\mathcal S}(M)>2$. In particular, each non-commutative unital ${\mathsf{C}}^*$-algebra is conform to this constraint.
\subsubsection{Submanifolds of states \textup{(}generalities\textup{)}}\label{finaltanspez} Start with the definition.
\begin{definition}\label{submann}
A non-void subset of states  $\Omega\subset {\mathcal S}(M)$ is termed  `submanifold of states' if for each state $\nu\in \Omega$ the set of all  parameterized curves $\gamma$  passing through $\nu$ and thereby evolving completely within $\Omega$, $\gamma\subset \Omega$, and admitting a  $\gamma$-compliant unital $^*$-representation of $M$ around $\nu$, is non-void.
 \end{definition}
A class of examples, and at the same time a key tool for identifying further examples of submanifolds, is given by the subsets ${\mathcal{S}}^\pi(M)$ defined in  \eqref{pimann1}, for any given unital $^*$-representation $\{\pi,{\mathcal H}_\pi\}$ of $M$. This will be described now.
Relating to these settings, and with   $N=\pi(M)^{\,\prime\prime}$, the following result holds.
\begin{lemma}\label{submannele}
Let $M$ be a unital ${\mathsf{C}}^*$-algebra. The subset ${\mathcal{S}}^\pi(M)$ is a submanifold of states if, and only if, ${\mathsf{dim}}\,N>2$.
\end{lemma}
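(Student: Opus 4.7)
The plan is to establish the biconditional by a dimensional analysis of $N$, with the main work concentrated in the ``if'' direction and split according to whether $\nu$ admits a non-trivial tangent form.

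For the ``only if'' direction I proceed by contrapositive, assuming $\mathsf{dim}\,N\leq 2$. If $\mathsf{dim}\,N=1$, then $N=\mathbb{C}\cdot\mathsf{1}$, so $\pi(x)=\nu(x)\mathsf{1}$ for a single character $\nu$ of $M$, and ${\mathcal S}^\pi(M)=\{\nu\}$ is a singleton admitting no parameterized curve. If $\mathsf{dim}\,N=2$, then $N=\mathbb{C}e_1\oplus\mathbb{C}e_2$ for two orthogonal minimal central projections, and every $\pi(x)$ decomposes as $\nu_1(x)e_1+\nu_2(x)e_2$ with $\nu_1,\nu_2$ distinct characters of $M$. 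Each vector state of $N$ is of the form $\omega_\lambda=\lambda\nu_1+(1-\lambda)\nu_2$, so ${\mathcal S}^\pi(M)=\{\omega_\lambda:\lambda\in[0,1]\}$ is a line segment. At the endpoint $\nu_1$, the coefficient $|\alpha_1(t)|^2$ of any implementing family attains its global maximum $1$, whence $t\mapsto|\alpha_1(t)|^2$ cannot be strictly monotonic in any neighborhood of an inner parameter $t_0$ mapped to $\nu_1$, so the induced state curve fails injectivity there. Hence ${\mathcal S}^\pi(M)$ is not a submanifold.

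For the ``if'' direction, assume $\mathsf{dim}\,N>2$ and fix $\nu\in{\mathcal S}^\pi(M)$. If $\nu$ is not a character of $M$ then by Corollary \ref{extan} one has ${\mathsf{T}}_\nu(M)\neq\{\mathsf{0}\}$; choosing any $f\in {\mathsf{T}}_\nu(M)\setminus\{\mathsf{0}\}$ and applying Corollary \ref{tanfolg}\,\eqref{tanfolg02} with respect to $\pi$ and a fixed $\varphi\in{\mathcal S}_{\pi,M}(\nu)$ produces the parameterized curve \eqref{tanfolg0b} passing through $\nu$ with tangent form $f$. Because its implementation is entirely via vectors in ${\mathcal H}_\pi$, the curve lies in ${\mathcal S}^\pi(M)$ and $\pi$ itself is $\gamma$-compliant around $\nu$.

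The remaining case, which I expect to be the main obstacle, is when $\nu\in{\mathcal S}^\pi(M)$ is a character of $M$, so that ${\mathsf{T}}_\nu(M)=\{\mathsf{0}\}$ and Corollary \ref{tanfolg} is unavailable. Here any implementing vector $\varphi\in{\mathcal S}_{\pi,M}(\nu)$ is a common eigenvector of $\pi(M)$; by Lemma \ref{char1/2} applied to the normal character $\nu_\pi$ of $N$, its support $e=s(\nu_\pi)\in Z(N)$ is a minimal central projection with $Ne=\mathbb{C}e$. The dimensional hypothesis then forces $\mathsf{dim}\,Ne^\perp=\mathsf{dim}\,N-1\geq 2$, hence $\mathsf{dim}\,e^\perp{\mathcal H}_\pi\geq 2$, and there exist orthonormal $\psi_1,\psi_2\in e^\perp{\mathcal H}_\pi$ implementing two distinct states $\eta_1,\eta_2\in{\mathcal S}^\pi(M)$. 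I construct a smooth implementation
\[
\varphi_t=\alpha_1(t)\,\varphi+\alpha_2(t)\,\psi_1+\alpha_3(t)\,\psi_2
\]
with coefficients chosen asymmetrically in $t$, for instance $|\alpha_1(t)|^2=1-t^2/2-t^3$, $|\alpha_2(t)|^2=t^2(1/2+t)^2$, $|\alpha_3(t)|^2=t^2(1/4-t^2)$ (and real square roots taken so that $\varphi_t$ is $C^1$ near $t=0$). Since $e$ is central the cross terms between $\varphi$ and $e^\perp{\mathcal H}_\pi$ vanish, so $\nu_t=|\alpha_1|^2\,\nu+|\alpha_2|^2\,\eta_1+|\alpha_3|^2\,\eta_2$ plus a possible cross term that, when present, only reinforces the asymmetry. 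Injectivity of $t\mapsto\nu_t$ near $t=0$ is then verified by noting that the three coefficient functions jointly distinguish $t$ from any $s\neq t$: the odd-order ($t^3$) perturbations in $|\alpha_1|^2$ and the asymmetric split between $|\alpha_2|^2$ and $|\alpha_3|^2$ break the $t\mapsto -t$ symmetry that is forced, to second order, by the vanishing tangent form. The hard part is precisely arranging this asymmetry while keeping a $C^1$ implementation whose associated states remain linearly independent of each other; this is where the hypothesis $\mathsf{dim}\,N>2$ is used in an essential way, providing the third independent direction $\psi_2$ needed to separate $\nu_t$ from $\nu_{-t}$.
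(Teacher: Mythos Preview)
Your ``only if'' direction and the non-character branch of ``if'' are correct and follow the paper. The character case, however, has a genuine gap in the injectivity argument. First, orthonormality of $\psi_1,\psi_2\in e^\perp{\mathcal H}_\pi$ does not by itself force the implemented states $\eta_1,\eta_2$ to be distinct (consider an amplified representation); what $\mathsf{dim}\,Ne^\perp\geq 2$ actually guarantees is two unit vectors in $e^\perp{\mathcal H}_\pi$ implementing distinct states on $N$, with no orthogonality claim attached. More seriously, your assertion that ``the three coefficient functions jointly distinguish $t$'' presupposes that $|\alpha_2(t)|^2$ and $|\alpha_3(t)|^2$ can be individually read off from $\nu_t$. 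One does recover $|\alpha_1(t)|^2$ as $(\nu_t)_\pi(e)$, but on $e^\perp$ the state reads $|\alpha_2|^2\eta_1+|\alpha_3|^2\eta_2+2\alpha_2\alpha_3\,\Re\langle\pi(\cdot)\psi_1,\psi_2\rangle$, and separating $|\alpha_2|^2$ from $|\alpha_3|^2$ requires the forms $\eta_1,\eta_2$ and the cross form to satisfy an independence condition you never establish. Dismissing the cross term as ``only reinforcing the asymmetry'' is not an argument; for generic orthonormal $\psi_1,\psi_2$ it is present and obstructs coefficient recovery.

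The paper avoids all of this with a piecewise, two-vector construction: take unit vectors $\xi,\eta\in e^\perp{\mathcal H}_\pi$ implementing \emph{distinct} states $\varrho\neq\mu$ on $M$ (existence follows directly from $\mathsf{dim}\,Ne^\perp\geq 2$), and set $\varphi_t=t\,\xi+\sqrt{1-t^2}\,\varphi$ for $t\geq 0$ and $\varphi_t=t\,\eta+\sqrt{1-t^2}\,\varphi$ for $t\leq 0$. Since $e$ is central, there are no cross terms at all and the state curve is simply $\nu_t=t^2\varrho+(1-t^2)\nu$ for $t\geq 0$ and $\nu_t=t^2\mu+(1-t^2)\nu$ for $t\leq 0$. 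Injectivity is then a one-line evaluation: $(\nu_t)_\pi(e)=1-t^2$ forces $s=\pm t$, and $s=-t\neq 0$ would give $\varrho=\mu$. The idea you are missing is to \emph{branch} at $t=0$ into two different directions rather than to encode the needed asymmetry through higher-order coefficients of a single formula; branching turns a delicate linear-independence verification into a trivial comparison of two distinct auxiliary states.
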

\begin{proof}
We may content with showing that under the condition mentioned through each state $\nu\in {\mathcal S}^\pi(M)$ there is passing a parameterized curve $\gamma=(\nu_t)$, with $\nu_0=\nu$, and admitting  $\{\pi,{\mathcal H}_\pi\}$ as a $\gamma$-compliant unital $^*$-representation around there. In fact, from $\gamma$-compliance of $\pi$ around $\nu$ the fact that $\gamma|J\subset {\mathcal{S}}^\pi(M)$ holds, for some symmetric interval $J$ around zero, will follow automatically, and thus the subset ${\mathcal{S}}^\pi(M)$ at $\nu$ will be conform to requirements of Definition \ref{submann}, with $\gamma|J$ considered  instead of $\gamma$.
Prior to starting the construction of $\gamma$, remark that if ${\mathsf{dim}}\,N\leq 2$ holds, existence of a parameterized curve $\gamma$  of the mentioned specification with  $\gamma\subset {\mathcal{S}}^\pi(M)$ and which were passing through an extremal state of ${\mathcal{S}}^\pi(M)$ will fail. In fact, by definition the notion of  `passing through' for a parameterized curve $\gamma$ requires injectivity of the map in some vicinity of the state of interest. But injectivity around an extremal state of ${\mathcal{S}}^\pi(M)$ necessarily will be violated since  ${\mathcal S}^\pi(M)$  either will be a single extremal state, or will be the (affinely convex) interval spanned by two extremal states. This follows since the $vN$-algebra $N$ then is commutative and generated by one or two minimal orthoprojections. Hence, the state space ${\mathcal S}(N)$ either consists of the only character state over $N$, or is the (affinely convex) interval spanned by the two character states over $N$. At the same time, these then are the only states which can be implemented by vectors over $N$.  Hence, in view of \eqref{pimann1},  ${\mathcal{S}}^\pi(M)$ then either can consist of a single character state, or is the (affinely convex) interval spanned by two character states of $M$.  Thus, the condition ${\mathsf{dim}}\,N>2$ is necessary for the assertion be true.

Suppose ${\mathsf{dim}}\,N> 2$ holds (remark that in order that such case can occur the condition ${\mathsf{dim}}\,M> 2$ has to be satisfied). As mentioned above, we are going to show that to any $\nu\in {\mathcal{S}}^\pi(M)$ a parameterized curve $\gamma$ exists which is passing through $\nu$ and is admitting  $\{\pi,{\mathcal H}_\pi\}$ as a $\gamma$-compliant $^*$-representation. To start with, note that if $\nu$ is a state obeying ${\mathsf{T}}_\nu(M)
\not=\{{\mathsf{0}}\}$, then to each $f\in {\mathsf{T}}_\nu(M)
\backslash\{{\mathsf{0}}\}$ the special parameterized curve $\gamma=(\nu_t)\subset {\mathcal S}(M)$ passing through $\nu$ and obeying $\nu_t^{\,\prime}|_{t=0}=f$ which is given by formula \eqref{tanfolg0b} can be considered.  Thereby, in line with Corollary \ref{tanfolg}\,\eqref{tanfolg02} this curve is admitting $\{\pi,{\mathcal H}_\pi\}$ as a $\gamma$-compliant $^*$-representation, since it is even satisfying $\gamma\subset {\mathcal{S}}^\pi(M)$.  Assume now $\nu$ to be a state obeying ${\mathsf{T}}_\nu(M)=\{{\mathsf{0}}\}$. By Corollary \ref{extan}, the state $\nu$ then has to be a character state of $M$. Let $\varphi\in {\mathcal S}_{\pi,M}(\nu)$.  Consider the vector state $\nu_\pi$ defined on $N$ by $\nu_\pi(x)=\langle x\varphi,\varphi\rangle_\pi$, for all $x\in N$. Since $\nu$ is a character state, as in the proof of Lemma \ref{char2} we infer that $\nu_\pi$ is a  normal character state on the $vN$-algebra $N$. Thus, Lemma \ref{char1/2} can be applied with $N$ (instead of $M$).  Let $z=s(\nu_\pi)$ be the support orthoprojection of $\nu_\pi$ within $N$. According to Lemma \ref{char1/2}\,\eqref{char1/2a} we have $z\in N\cap N^{\,\prime}$, and owing to  $0=\nu_\pi(z^\perp)=\|z^\perp\varphi\|^2_\pi$ we have  $z\varphi=\varphi$.  On the other hand, by Lemma \ref{char1/2}\,\eqref{char1/2b} we have minimality of $z$ within $N$, that is, $N=\mathbb{ C}z+Nz^\perp$ has to be fulfilled. Therefore, owing to ${\mathsf{dim}}\,N> 2$ the $vN$-algebra $N z^\perp$ acting over $z^\perp {\mathcal H}_\pi$ has to be at least two-dimensional. Thus, there are unit vectors $\eta, \xi\in {\mathcal H}_\pi$ obeying $z\eta=z\xi={\mathsf{0}}$ and such that $\varrho_\pi\not=\mu_\pi$ holds, for the vector states $\varrho_\pi$ and $\mu_\pi$ implemented by $\xi$ and $\eta$ over $N$. Accordingly, if $\varrho, \mu\in {\mathcal S}(M)$ are defined by $\xi\in {\mathcal S}_{\pi,M}(\varrho)$ and $\eta\in {\mathcal S}_{\pi,M}(\mu)$, then  $\varrho\not=\mu$ follows and the following relations are fulfilled:
\begin{subequations}\label{mannig2n}
\begin{equation}\label{mannig2cn}
  z\varphi=\varphi,\ z\xi=z\eta={\mathsf{0}}
  \end{equation}
with the orthoprojection  $z\in N\cap N^{\,\prime}$ obeying ${\mathsf{0}}<z<{\mathsf{1}}$. From this obviously
\begin{equation}\label{charbedn}
  \langle \pi(x)\varphi,\xi\rangle_\pi=\langle \pi(x)\xi,\varphi\rangle_\pi=0=\langle \pi(x)\varphi,\eta\rangle_\pi=\langle \pi(x)\eta,\varphi\rangle_\pi
\end{equation}
follows,  for each $x\in M$. From \eqref{charbedn} the orthogonality relations $\xi\perp \varphi$ and $\eta\perp \varphi$ are seen to hold. Thus, if we let a  map  $[-1,1]\ni t\longmapsto\varphi_t\in {\mathcal H}_\pi$ be defined by \begin{equation}\label{mannig2an} \varphi_t =
  \begin{cases}
  t\,\xi +\sqrt{1-t^2}\,\varphi & \text{ for }\phantom{\ \,-}0\leq t\leq 1\\
  & \\
  t\,\eta +\sqrt{1-t^2}\,\varphi & \text{ for }-1\leq t\leq 0
  \end{cases}
\end{equation}
then each $\varphi_t$ is a unit vector. Obviously, $(\varphi_t)$ is differentiable, with $\varphi_t^{\,\prime}|_{t=0}={\mathsf{0}}$.
In addition, due to \eqref{charbedn} the state $\nu_t$ implemented by $\varphi_t$ in respect of $\pi$ then reads
\begin{equation}\label{mannig2bn}
  \nu_t =
  \begin{cases}
  t^2\varrho+(1-t^2)\,\nu & \text{ for }\phantom{\ \,-}0\leq t\leq 1\\
  & \\
  t^2\mu +(1-t^2)\,\nu & \text{ for }-1\leq t\leq 0
  \end{cases}
\end{equation}
\end{subequations}
Let $(\nu_t)_\pi$ be the vector state implemented by $\varphi_t$ on $N$. Then, the map
$\gamma_\pi: [-1,1]\ni t\longmapsto (\nu_t)_\pi$
is injective. In fact, assuming  $(\nu_t)_\pi=(\nu_s)_\pi$ for $s,t\in [-1,1]$ and $s\not= t$, owing to $z=s(\nu_\pi)\in N$ and \eqref{mannig2cn} and \eqref{mannig2an} would imply
$(1-t^2)=(\nu_t)_\pi(z)=(\nu_s)_\pi(z)=(1-s^2)$. Thus $t=-s\not=0$ had to be followed. Hence, since $(\nu_\alpha)_\pi(\pi(x))=\nu_\alpha(x)$ holds, for all $x\in M$ and any $\alpha\in [-1,1]$, in view of \eqref{mannig2bn} from the previous $\varrho=\mu$ had to be followed. This   contradicted the choice of $\varrho$ and $\mu$. Therefore, $(\nu_t)_\pi\not=(\nu_s)_\pi$ for all $s,t\in [-1,1]$ with $s\not= t$, and thus  $\gamma_\pi$ has to be  injective. Now, due to normality, $(\nu_t)_\pi$ over $N$ is uniquely determined by its restriction onto the strongly dense $^*$-subalgebra $\pi(M)$. Hence, injectivity of $\gamma: [-1,1]\ni t\longmapsto \nu_t$ follows.
Thus, by construction  $\gamma\subset {\mathcal S}^\pi(M)$ is a parameterized curve passing through $\nu$, and with  $\{\pi,{\mathcal H}_\pi\}$ being $\gamma$-compliant around $\nu$.
\end{proof}
\begin{theorem}\label{mannig}
${\mathcal S}(M)$ is a submanifold of states if, and only if, $\#\, {\mathsf{ex}}\,{\mathcal{S}}(M)>2$.
\end{theorem}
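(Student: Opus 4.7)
The plan is to derive the theorem as an essentially direct application of Lemma \ref{submannele} to the universal $^*$-representation of $M$, with an elementary affine-dimensional argument for the converse.

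For the sufficiency direction, I assume $\#\,{\mathsf{ex}}\,{\mathcal S}(M) > 2$, which (as noted in the text) is equivalent to ${\mathsf{dim}}\,M\geq 3$. I would then pass to the universal $^*$-representation $\{\pi_u,{\mathcal H}_{\pi_u}\}$ of $M$ (the direct sum of the GNS representations of all states of $M$). Since $\pi_u$ is isometric and every state of $M$ arises as a vector state relative to $\pi_u$, in the notation of \eqref{pimann1} one has ${\mathcal S}^{\pi_u}(M)={\mathcal S}(M)$. Moreover the enveloping $vN$-algebra $N_u=\pi_u(M)^{\,\prime\prime}$ contains the isometric copy $\pi_u(M)\cong M$, so ${\mathsf{dim}}\,N_u\geq{\mathsf{dim}}\,M\geq 3>2$. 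Lemma \ref{submannele} then yields that ${\mathcal S}^{\pi_u}(M)$ is a submanifold of states, and since ${\mathcal S}^{\pi_u}(M)={\mathcal S}(M)$ and the constraint $\gamma\subset\Omega$ of Definition \ref{submann} is tautological for $\Omega={\mathcal S}(M)$, the submanifold property transfers without further work to ${\mathcal S}(M)$.

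For the necessity direction I would argue by contraposition. If $\#\,{\mathsf{ex}}\,{\mathcal S}(M)\leq 2$, then since states separate hermitian elements, $M$ is commutative with ${\mathsf{dim}}\,M\in\{1,2\}$, so ${\mathcal S}(M)$ is either a single point ($M\cong{\mathbb C}$) or an affine segment whose two endpoints are the two character states ($M\cong{\mathbb C}^2$). In the first case no injective continuous map from an open interval into a one-point set can exist. In the second case, at an endpoint character state $\nu$ any continuous map from an open interval into ${\mathcal S}(M)$ with $\nu$ taken at an inner parameter value would, by injectivity, be forced to take values on both sides of $\nu$ along the one-dimensional segment—impossible since $\nu$ is a boundary point. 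Either way, Definition \ref{submann} fails at such $\nu$.

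The main obstacle resides entirely in the forward direction and is already absorbed into Lemma \ref{submannele}: one must trust its conclusion that for ${\mathsf{dim}}\,N_u>2$ the curves constructed there around an arbitrary $\nu\in{\mathcal S}^{\pi_u}(M)$—in particular the piecewise second-order construction of type \eqref{mannig2an} needed when $\nu$ is a character state—do genuinely exhibit $\pi_u$ itself as a $\gamma$-compliant $^*$-representation in the $C^1$-implementation sense demanded by Definition \ref{submann}. Given Lemma \ref{submannele}, the reduction via the universal representation is otherwise immediate.
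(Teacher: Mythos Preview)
Your proof is correct and takes a genuinely different, more economical route in the sufficiency direction than the paper does. The paper splits into two cases: at a state $\nu$ with ${\mathsf T}_\nu(M)\not=\{{\mathsf 0}\}$ it invokes Corollary \ref{tanfolg}\,\eqref{tanfolg02} directly, while at a character state it fixes a $^*$-representation $\pi$ adapted to $\nu$ and two further extremal states, argues that ${\mathsf{dim}}\,\pi(M)^{\prime\prime}>2$ via Lemma \ref{char2}, and only then appeals to Lemma \ref{submannele}. Your argument bypasses this dichotomy entirely: the universal representation $\pi_u$ is faithful, so ${\mathsf{dim}}\,\pi_u(M)^{\prime\prime}\geq{\mathsf{dim}}\,M\geq 3$, and since every state is a $\pi_u$-vector state, ${\mathcal S}^{\pi_u}(M)={\mathcal S}(M)$ and one invocation of Lemma \ref{submannele} settles everything at once. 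The necessity direction is the same in both. What the paper's case-by-case treatment buys is that each state is handled within a concrete, small (indeed separable) representation rather than the enormous universal one, but for the bare statement of Theorem \ref{mannig} this plays no role. Your caveat about the character-state construction \eqref{mannig2an} being absorbed into Lemma \ref{submannele} is well placed; the paper's own proof relies on exactly the same lemma at that point, so the two arguments are on equal footing in that respect.
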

\begin{proof}
In view of Definition \ref{submann}, one has to show that under the condition mentioned through each state $\nu\in {\mathcal S}(M)$ there is passing a parameterized curve $\gamma$ admitting a $\gamma$-compliant unital $^*$-representation $\{\pi,{\mathcal H}_\pi\}$ around there.

First note that
if the condition $\#\, {\mathsf{ex}}\,{\mathcal S}(M)\leq 2$ holds, then existence of a parameterized curve $\gamma$ of the mentioned specification and passing through a state $\nu$ will fail in case $\nu$ is extremal. In fact, by definition the notion of  `passing through' for a parameterized curve $\gamma$ requires injectivity of the map in some vicinity of the state. But around an extremal state injectivity necessarily is violated since  ${\mathcal S}(M)$  then either will be a single extremal state, or will be the (affinely convex) interval spanned by two extremal states. Thus, the condition $\#\, {\mathsf{ex}}\,{\mathcal S}(M)> 2$ is necessary for the assertion be true.

Suppose the mentioned condition is fulfilled. Then, the assertion holds at each $\nu$ with ${\mathsf{T}}_\nu(M)\not=\{{\mathsf{0}}\}$. In fact, by
Corollary \ref{tanfolg}\,\eqref{tanfolg02}, to each $f\in {\mathsf{T}}_\nu(M)\backslash \{{\mathsf{0}}\}$ the parameterized curve $\gamma$ of \eqref{tanfolg0b} is passing through $\nu$ and exhibiting $f$ as tangent form at $\nu$. Also, each unital $^*$-representation $\{\pi,{\mathcal H}_\pi\}$ with non-trivial $\pi$-fibre of $\nu$ then is $\gamma$-compliant.

Suppose a state $\nu$ with ${\mathsf{T}}_\nu(M)=\{{\mathsf{0}}\}$ to exist. Let $\nu$ be such a state. Then, by Corollary \ref{extan}, $\nu$ is a character state of $M$. Since $\#\, {\mathsf{ex}}\,{\mathcal S}(M)> 2$ is fulfilled,  another two extremal states $\varrho,\mu\in {\mathsf{ex}}\,{\mathcal S}(M)$ exist with $\nu\not=\varrho$ and $\nu\not=\mu$. Consider a unital $^*$-representation  $\{\pi,{\mathcal H}_\pi\}$ such that the $\pi$-fibres of the three states are non-trivial. Let $N=\pi(M)^{\,\prime\prime}$. By Lemma \ref{char2},  a central orthoprojection $z\in N\cap N^{\,\prime}$ exists such that $N={\mathbb{C}}\,z+N z^\perp$ and
\[
z {\mathcal S}_{\pi,M}(\nu)={\mathcal S}_{\pi,M}(\nu),\ z^\perp {\mathcal S}_{\pi,M}(\varrho)={\mathcal S}_{\pi,M}(\varrho),\ z^\perp {\mathcal S}_{\pi,M}(\mu)={\mathcal S}_{\pi,M}(\mu)
\]
are fulfilled. Hence, the $vN$-algebra $N$ at least admits two vector states $\varrho_\pi\not=\mu_\pi$ with  supports obeying $s(\varrho_\pi)\leq z^\perp$ and $s(\mu_\pi)\leq z^\perp$. Hence, $N z^\perp$ is of dimension two at least, and therefore ${\mathsf{dim}}\,N>2$ has to hold.
Thus, in view of Lemma \ref{submannele}, ${\mathcal S}^\pi(M)$ is a submanifold obeying $\nu\in {\mathcal S}^\pi(M)\subset {\mathcal S}(M)$, and  a parameterized curve passing through $\nu$ and admitting $\{\pi,{\mathcal H}_\pi\}$ as $\gamma$-compliant $^*$-representation exists.
\end{proof}
To given state $\nu$ of a unital ${\mathsf{C}}^*$-algebra $M$ with  $M\not=\mathbb{C
}{\mathsf{1}}$, let us consider the stratum  $\Omega_M(\nu)$ of the state $\nu$ as defined by \eqref{stratum0}.  For strata the following holds.
\begin{theorem}\label{substrat}
$\Omega=\Omega_M(\nu)$  is a submanifold of states  if, and only if, $\nu\not\in {\mathsf{ex}}\,{\mathcal{S}}(M)$.
\end{theorem}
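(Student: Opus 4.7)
The proof splits into two implications that rest on very different ideas. For the forward direction, I would argue as follows: if $\Omega = \Omega_M(\nu)$ is a submanifold in the sense of Definition \ref{submann}, then there exists at least one parameterized curve passing through some state in $\Omega$ and evolving inside $\Omega$, and any such curve is by definition injective near its base point, so $\#\,\Omega > 1$. By Lemma \ref{multistrat0} (in its contrapositive form, equivalent to Corollary \ref{multistrat}), this forces $\nu \notin {\mathsf{ex}}\,{\mathcal S}(M)$.

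For the converse, assume $\nu \notin {\mathsf{ex}}\,{\mathcal S}(M)$. By Corollary \ref{multistrat}, $\#\,\Omega > 1$, and I must construct, for every $\varrho \in \Omega$, a parameterized curve through $\varrho$ that evolves inside $\Omega$ and admits a compliant $^*$-representation. Fix $\varrho \in \Omega$; since $\Omega_M(\varrho) = \Omega$, another application of Corollary \ref{multistrat} shows that $\varrho$ itself is mixed. The plan is to work inside the cyclic GNS representation $\{\pi_0,\mathcal{H}_{\pi_0}\}$ of $\varrho$, with cyclic vector $\varphi_0 \in \mathcal{S}_{\pi_0,M}(\varrho)$; set $N = \pi_0(M)^{\,\prime\prime}$ and $p = p_{\pi_0}(\varphi_0) = s(\varrho_{\pi_0})$ (by \eqref{traeger}). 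Any nontrivial convex decomposition $\varrho = \lambda\mu_1 + (1-\lambda)\mu_2$ lifts through Sakai's Radon--Nikodym theorem (Remark \ref{saka}\,\eqref{saka2}) to a nontrivial convex decomposition of the normal state $\varrho_{\pi_0}$ on $N$, so $p$ is non-minimal in $N$ and hence $pNp \neq \mathbb{C}p$.

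I would now pick any $y \in (pNp)_{\mathsf h}$ with $y \notin \mathbb{R}p$ and set $x = y - \varrho_{\pi_0}(y)\,p$. Then $x = x^* \in pNp$, $x \neq 0$, $\langle x\varphi_0,\varphi_0\rangle_{\pi_0} = 0$, and $x\varphi_0 \neq 0$ because $\varphi_0$ is separating for the reduced algebra $pNp$ acting on $p\mathcal{H}_{\pi_0}$ (standard Tomita reduction, since $\varphi_0$ is cyclic for $N^{\,\prime}$ on $p\mathcal{H}_{\pi_0}$). The curve is
\[
\varphi_t \;=\; \frac{({\mathsf 1} + tx)\varphi_0}{\sqrt{1 + t^2\|x\varphi_0\|_{\pi_0}^2}}, \qquad \nu_t \in \mathcal{S}_{\pi_0,M}^{-1}(\nu_t),
\]
defined for $t$ in a small symmetric interval $I_0$ about $0$, with $\nu_0 = \varrho$. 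A direct expansion gives the tangent form $f(y) = \langle \pi_0(y)\,x\varphi_0,\varphi_0\rangle_{\pi_0} + \langle \pi_0(y)\varphi_0,x\varphi_0\rangle_{\pi_0}$; extending by normality to $N$ and evaluating at $x$ yields $2\|x\varphi_0\|_{\pi_0}^2 > 0$, so by strong density of $\pi_0(M)$ in $N$ the form $f$ is non-zero on $M$. Lemma \ref{constr} then ensures injectivity of $t \mapsto \nu_t$ on a neighbourhood of $0$, and $\{\pi_0,\mathcal{H}_{\pi_0}\}$ is $\gamma$-compliant around $\varrho$ by construction.

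The hard part, and the real content of the argument, is verifying that $\nu_t \in \Omega_M(\nu)$ for small $|t|$. Because $x \in pNp$ commutes with $p$, the operator $({\mathsf 1} + tx)$ leaves $p\mathcal{H}_{\pi_0} = [N^{\,\prime}\varphi_0]$ invariant; for $|t|\|x\| < 1$ it restricts (by Neumann series) to an invertible bounded operator on $p\mathcal{H}_{\pi_0}$. Since $({\mathsf 1} + tx) \in N$ commutes with $N^{\,\prime}$, one computes
\[
[N^{\,\prime}\varphi_t] \;=\; \overline{({\mathsf 1}+tx)\,N^{\,\prime}\varphi_0} \;=\; ({\mathsf 1}+tx)\,p\mathcal{H}_{\pi_0} \;=\; p\mathcal{H}_{\pi_0},
\]
so $p_{\pi_0}(\varphi_t) = p = s(\varrho_{\pi_0})$, and by \eqref{traeger} $s((\nu_t)_{\pi_0}) = s(\varrho_{\pi_0})$. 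Theorem \ref{suppabsstet} then yields $\nu_t \dashv\vdash \varrho$, i.e., $\nu_t \in \Omega_M(\varrho) = \Omega_M(\nu)$. The crucial obstacle here is that a generic perturbation $({\mathsf 1} + t\psi)\varphi_0$ with $\psi \in N$ would decrease the support and leave the stratum (it would only yield a curve in $\Omega_M^0(\nu)$); confining the perturbing element to the reduced algebra $pNp$ is exactly what makes the perturbation act as a bijection of the support subspace, preserving mutual absolute continuity rather than merely one-sided absolute continuity.
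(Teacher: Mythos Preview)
Your proof is correct and takes a genuinely different route from the paper. The forward direction is identical. For the converse, the paper proceeds in two stages: starting from an auxiliary $\varrho\in\Omega_M(\nu)$ with $\varrho\neq\nu$, it builds the tangent vector $\psi=\zeta-F\varphi$ (with $\zeta,\varphi$ implementing $\varrho,\nu$ and $h^\pi_{\zeta,\varphi}\geq 0$), which via the standard construction \eqref{gff0} yields a curve $(\nu_t)$ lying only in the closure $\Omega^0_M(\nu)$. It then forces the curve back into the stratum by forming the convex combination $\mu_t=t^2\nu+(1-t^2)\nu_t$, implemented in the doubled representation $\pi\oplus\pi$; domination $t^2\nu\leq\mu_t$ gives the missing direction $\nu\dashv\mu_t$. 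Your argument bypasses both the auxiliary state and the doubling: working in the GNS representation of $\varrho$, you perturb the cyclic vector by a bounded invertible element ${\mathsf 1}+tx$ with $x\in (pNp)_{\mathsf h}$, and the invertibility of this element on $p\mathcal{H}_{\pi_0}$ immediately forces $p_{\pi_0}(\varphi_t)=p$, so the curve never leaves the stratum. Your approach is more direct and stays in a single representation; the paper's two-step construction, on the other hand, is set up so that an arbitrary tangent form of the type \eqref{substrat12} can be realized, which is what feeds into the subsequent Corollary~\ref{extsubstrat} and Example~\ref{tanstrat}.
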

\begin{proof}
Clearly, in order that $\Omega$ is a submanifold of states requires $\#\,\Omega > 1$, at least. From Corollary \ref{multistrat} we know that the latter condition is equivalent to $\Omega=\Omega_M(\nu)$, with $\nu\not\in  {\mathsf{ex}}\,{\mathcal{S}}(M)$. Thus, the proof will be done if we can show that a stratum $\Omega=\Omega_M(\nu)$ obeying $\#\,\Omega > 1$ is a submanifold of states. This will be done now.

For $\varrho\in\Omega_M(\nu)$ we have $\Omega_M(\varrho)=\Omega_M(\nu)$. Thus, $\nu$ can stand for any state $\varrho$ in $\Omega$, and one may content oneself with constructing a parameterized curve $\gamma\subset \Omega_M(\nu)$ which is passing through $\nu$ and is admitting a $\gamma$-compliant unital  $^*$-representation around there. In line with this, suppose $\varrho\in \Omega_M(\nu)$ with $\varrho\not=\nu$ exists. By \eqref{stratum}, we then especially have $\varrho \dashv \nu$ and $\nu \dashv \varrho$, see Definition \ref{absstet}. Also, by Theorem \ref{suppabsstet}, to given unital $^*$-representation $\{\pi,{\mathcal{H}}_\pi\}$ such that the $\pi$-fibres of $\varrho$ and $\nu$ are non-void, for chosen $\varphi\in {\mathcal{S}}_{\pi,M}(\nu)$ unique  $\zeta \in {\mathcal{S}}_{\pi,M}(\varrho)$ exists  which is obeying $h^\pi_{\zeta,\varphi}\geq {\mathsf{0}}$ on $N^{\,\prime}$, with   $N=\pi(M)^{\,\prime\prime}$. Then, $F(M|\nu,\varrho)=\|h^\pi_{\zeta,\varphi}\|_1=h^\pi_{\zeta,\varphi}({\mathsf{1}})=\langle\zeta,\varphi\rangle_\pi$ by Lemma \ref{bas3}, and  in view of Theorem \ref{suppabsstet}, \eqref{supp2} and \eqref{supp3}, the conditions
\begin{subequations}\label{substrat1}
\begin{eqnarray}\label{substrat1a}
p_\pi(\zeta)& =& p_\pi(\varphi)\\
\label{substrat1b}
\zeta\phantom{)}&\in &[N_+\varphi] =[\pi(M)_+\varphi]
\end{eqnarray}
\end{subequations}
are satisfied. Note that  $\varrho\not=\nu$ implies $F=F(M|\nu,\varrho)<1$. Then, in considering the vector $\psi$ given by  $\psi=\zeta-F \varphi$, one can be assured that $\psi\not={\mathsf{0}}$ and $\Re\langle\psi,\varphi\rangle_\pi=0$ hold, and due to the conditions of \eqref{substrat1} then $\psi$ is satisfying
\begin{subequations}\label{substrat12}
\begin{eqnarray}\label{substrat1c}
p_\pi(\varphi)\psi&=&\psi\\
\label{substrat1d}
\psi&\in& [N_{\mathsf{h}}\varphi]=[\pi(M)_{\mathsf{h}}\varphi]
\end{eqnarray}
As a consequence of Theorem \ref{Cstern}, if a form $f$ generated by $\psi$ in accordance with
\begin{equation}\label{substrat1e}
f(\cdot)=\langle\pi(\cdot)\psi,\varphi\rangle_\pi+\langle\pi(\cdot)\varphi,\psi\rangle_\pi
\end{equation}
\end{subequations}
is considered, then in view of \eqref{substrat1d} and $\Re\langle\psi,\varphi\rangle_\pi=0$ we have $f\in {\mathsf{T}}_\nu(M)$, with  $\|f\|_\nu=\|\psi\|_\pi\not=0$. Note that, in terms of the notions used in Theorem \ref{Cstern}, we see that    $\psi=\hat{\psi}_0$ is fulfilled in the case at hand. Hence, $f\in {\mathsf{T}}_\nu(M)\backslash \{{\mathsf{0}}\}$, and then, as has been commented on in context of \eqref{gff0}, one can be assured that $\gamma=(\nu_t)$ defined as in \eqref{gff3} is a parameterized curve passing through $\nu$ such that  $\{\pi,{\mathcal{H}}_\pi\}$ is $\gamma$-compliant and $f$ is the tangent form at $\nu$ along $\gamma$. Thereby, note that according to condition \eqref{substrat1c},
the construction procedure \eqref{gff1} for the $\nu_t$ implementing vectors $\varphi_t$ assures that  $p_\pi(\varphi_t)\leq p_\pi(\varphi)$ is fulfilled, for all $t$ with $-1\leq t \leq 1$. Since this is the same as $s((\nu_t)_\pi)\leq s(\nu_\pi)$, from applying Theorem \ref{suppabsstet} to $\nu_t,\,\nu$ we infer that $\nu_t\dashv \nu$, for the above range of $t$. Hence, $\gamma\subset \Omega^0_M(\nu)$ follows, see \eqref{vorstratum}. Now, let us consider the combined family $(\mu_t)$ of states given by
\begin{subequations}\label{substrat23}
\begin{equation}\label{substrat2}
\mu_t= t^2\nu+(1-t^2)\nu_t
\end{equation}
It is easy to see that $t\mapsto \mu_t$ is continuously differentiable, with $\mu_0=\nu$ and $\mu_t^{\,\prime}|_{t=0}=f$.
Thus, by Lemma \ref{constr}, around $t=0$ by the map $t\mapsto \mu_t$ a parameterized curve passing through $\nu$ and exhibiting $f$ as tangent form at $\nu$ is defined. Moreover, by Lemma \ref{setprop}\,\eqref{setprop1} the set $\Omega^0_M(\nu)$ is affinely convex. Thus, due to $\nu,\nu_t\in \Omega^0_M(\nu)$ and in view of \eqref{substrat2}, we can be sure that $\mu_t \in\Omega^0_M(\nu)$, that is, $\mu_t\dashv \nu$ holds. On the other hand, \eqref{substrat2} at $t\not=0$ implies $t^2\nu\leq \mu_t$. According to Remark \ref{saka}\,\eqref{saka1} domination of states is a special case of $\dashv$. Thus $\nu\dashv \mu_t$ follows. Hence, in summarizing, we infer that $\mu_t\dashv\vdash\nu$, at each $t\not=0$. From the latter and since $\mu_0=\nu$ holds, we may conclude that $\mu_t\in \Omega_M(\nu)$, for all $t\in [-1,1]$. Thus, the parameterized curve $(\mu_t)$ is passing through $\nu$ and evolving in $\Omega_M(\nu)$. Define vectors $\xi_t\in {\mathcal H}_\pi \oplus {\mathcal H}_\pi $ by
\begin{equation}\label{substrat3}
\xi_t=\sqrt{1-t^2}\,\varphi_t\oplus t\,\varphi
\end{equation}
\end{subequations}
Then, in respect of the direct sum representation $\pi\oplus \pi$ acting over the direct sum Hilbert space  ${\mathcal H}_\pi \oplus {\mathcal H}_\pi$, for all parameter values sufficiently small by the map $t\mapsto \xi_t\in {\mathcal{S}}_{\pi,M}(\mu_t)$ a continuously differentiable implementation of $(\mu_t)$ around $\nu=\mu_0$ is given. Hence,  the parameterized curve $(\mu_t)\subset \Omega_M(\nu)$ is passing through $\nu$ and in addition admits a compliant unital $^*$-representation around $\nu$.
\end{proof}

On assuring existence of a parameterized curve $\gamma\subset \Omega_M(\nu)$ passing through $\nu$ and admitting a $\gamma$-compliant unital $^*$-representation around $\nu$, the previous proof provides some more general and useful insight.
\begin{corolla}\label{extsubstrat}
Let $\nu\in \Omega$, with $\Omega=\Omega_M(\omega)$, $\omega\not\in {\mathsf{ex}}\,{\mathcal{S}}(M)$.  Suppose $f\in {\mathsf{T}}_\nu(M)\backslash \{{\mathsf{0}}\}$ can be given by formula \eqref{substrat1e},  with $\varphi\in {\mathcal{S}}_{\pi,M}(\nu)$, $\psi\in {\mathcal{H}}_\pi\backslash \{{\mathsf{0}}\}$ obeying \eqref{substrat1c}-\eqref{substrat1d}. Then, there exists a parameterized curve $\gamma=(\mu_t)$ passing through $\nu=\mu_0$ and admitting a $\gamma$-compliant unital $^*$-representation around there such that in addition  $\gamma\subset \Omega$ holds and $f=\mu_t^{\,\prime}|_{t=0}$ is fulfilled.
\end{corolla}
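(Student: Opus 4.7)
The strategy is to recycle the construction already carried out in the proof of Theorem~\ref{substrat}, but starting directly from the given vector $\psi$ instead of deriving it from a second state in the stratum. Specifically, set
\[
\varphi_t=\frac{\varphi+t\,\psi}{\sqrt{1+t^{2}\|\psi\|_\pi^{2}}},\qquad |t|\le\|\psi\|_\pi^{-1},
\]
and let $\nu_t$ be the state implemented by $\varphi_t$ via $\pi$. By condition \eqref{substrat1d} we have $\psi\in[\pi(M)_{\mathsf h}\varphi]$, so Theorem~\ref{Cstern}\,\eqref{Cstern1a}$\Leftrightarrow$\eqref{Cstern2} (combined with the explicit construction \eqref{gff0}) tells me that $(\nu_t)$ is a differentiable family with $\nu_0=\nu$ and $\nu_t^{\,\prime}|_{t=0}=f$, implemented differentiably in ${\mathcal H}_\pi$, so that $\{\pi,{\mathcal H}_\pi\}$ is trivially compliant.

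Next I would exploit the key fact \eqref{substrat1c}: because $p_\pi(\varphi)\psi=\psi$, every $\varphi_t$ lies in $p_\pi(\varphi){\mathcal H}_\pi$, hence $p_\pi(\varphi_t)\le p_\pi(\varphi)$, i.e.\ $s((\nu_t)_\pi)\le s(\nu_\pi)$. By Theorem~\ref{suppabsstet}\,\eqref{supp2}$\Leftrightarrow$\eqref{supp1} this yields $\nu_t\dashv\nu$ throughout, so $(\nu_t)\subset\Omega^0_M(\nu)=\Omega^0_M(\omega)$. Now form the convexification
\[
\mu_t=t^{2}\nu+(1-t^{2})\,\nu_t,\qquad |t|\le 1.
\]
Since $\Omega^0_M(\nu)$ is affinely convex by Lemma~\ref{setprop}\,\eqref{setprop1}, we have $\mu_t\dashv\nu$ for every $t$. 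Conversely, for $t\ne 0$ the trivial domination $t^{2}\nu\le\mu_t$ gives $\nu\ll\mu_t$ and hence $\nu\dashv\mu_t$ (Remark~\ref{saka}\,\eqref{saka1}). Combined, $\mu_t\dashv\vdash\nu$ for $t\ne 0$, which together with $\mu_0=\nu$ means $\mu_t\in\Omega_M(\nu)=\Omega$ for every $|t|\le 1$. This is the central point where the two halves of mutual absolute continuity must be verified separately; it is the only subtle step, since without \eqref{substrat1c} the first half would fail.

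Finally I would check that $(\mu_t)$ actually defines a curve in the sense required. A direct differentiation gives $\mu_t^{\,\prime}|_{t=0}=\nu_t^{\,\prime}|_{t=0}=f$, continuous in $t$, with $f\ne 0$; Lemma~\ref{constr} then produces an open neighborhood $J$ of $0$ on which the map $t\mapsto\mu_t$ is injective, so $(\mu_t|_J)$ is a parameterized curve passing through $\nu$ with tangent form $f$. For $\gamma$-compliance I would invoke the direct-sum representation $\{\pi\oplus\pi,{\mathcal H}_\pi\oplus{\mathcal H}_\pi\}$ with implementing family
\[
\xi_t=\sqrt{1-t^{2}}\,\varphi_t\;\oplus\; t\,\varphi,
\]
which lies in ${\mathcal S}_{\pi\oplus\pi,M}(\mu_t)$, is manifestly differentiable at $t=0$ with derivative $\psi\oplus\varphi$, and so exhibits $\pi\oplus\pi$ as a $\gamma$-compliant representation of $M$ around $\nu$. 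The main obstacle in this plan is only conceptual: making sure that the precise roles of conditions \eqref{substrat1c} and \eqref{substrat1d} are properly separated — the former is what keeps the auxiliary curve $(\nu_t)$ inside $\Omega^0_M(\omega)$, while the latter is what guarantees that $f$ lies in the tangent space and that Theorem~\ref{Cstern}\,\eqref{Cstern2} applies — since without either one the construction would break down.
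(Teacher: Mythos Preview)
Your proposal is correct and follows essentially the same route as the paper: the paper's proof simply says to apply the constructions from the proof of Theorem~\ref{substrat} starting at \eqref{substrat23}, and you have spelled out precisely those steps (the auxiliary family $(\nu_t)$ built from $\varphi_t$, the convexification $\mu_t=t^2\nu+(1-t^2)\nu_t$, the verification $\mu_t\in\Omega_M(\nu)=\Omega$, and the direct-sum implementation $\xi_t=\sqrt{1-t^2}\,\varphi_t\oplus t\,\varphi$). Your separation of the roles of \eqref{substrat1c} and \eqref{substrat1d} is accurate and matches how these conditions are used in the paper.
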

\begin{proof}
Obviously, the result follows if in respect of $f$ and under the given assumptions the constructions of the previous proof starting from  \eqref{substrat23} on are applied.
\end{proof}
\subsubsection{More specific examples of submanifolds of states}\label{spezsubmann}
In the following example the subset ${\mathcal S}^{\mathsf{mix}}(M)$ of mixed states as defined in \eqref{ccomb0} will be referred to.
\begin{example}\label{mixedmann}
${\mathcal S}^{\mathsf{mix}}(M)$ is a submanifold of states if, and only if, $M\not=\mathbb{C}\,{\mathsf{1}}$.
\end{example}
\begin{proof}
Obviously, in view of \eqref{ccomb0}, $M\not=\mathbb{C}\,{\mathsf{1}}$ is necessary for ${\mathcal S}^{\mathsf{mix}}(M)\not=\emptyset$. Thus, in order to be a non-void set,  ${\mathsf{dim}} M >1$ must hold. Hence $\#\, {\mathsf{ex}}\,{\mathcal{S}}(M)> 1$ and therefore ${\mathcal S}(M)$ contains a proper (affinely convex) interval of states. Thus, a continuum of mixed states exists. Let $\nu\in {\mathcal S}^{\mathsf{mix}}(M)$. By Lemma \ref{extriv},  we have ${\mathsf{T}}_\nu(M)\not=\{\mathsf 0\}$. Let $f\not={\mathsf{0}}$, $f\in {\mathsf{T}}_\nu(M)$, and be $\{\pi,{\mathcal H}_\pi\}$ a unital $^*$-representation with ${\mathcal S}_{\pi,M}(\nu)\not=\emptyset$. According to Corollary \ref{tanfolg}, $\{\pi,{\mathcal H}_\pi\}$ is $\gamma$-compliant around $\nu$ for the parameterized curve $\gamma=(\nu_t)$ passing through $\nu=\nu_0$ and obeying $\nu_t^{\,\prime}|_{t=0}=f$  which is given by formula \eqref{tanfolg0b}. By Corollary \ref{auxpure1}, $\gamma|J\subset {\mathcal S}^{\mathsf{mix}}(M)$ will happen, with a suitable subinterval $J$ around $0$.
\end{proof}
Another set of interest is the set of the extremal states which are not characters:
\begin{subequations}\label{edef0}
\begin{equation}\label{edef2}
{\mathcal S}^{{\mathsf{ex}}}(M)=\bigl\{ \nu\in {\mathsf{ex}}\,{\mathcal S}(M):\ \exists x,y\in M, \nu(x y)\not=\nu(x)\nu(y)\bigr\}
\end{equation}
\begin{example}\label{puremann}
Let $M$ be a unital ${\mathsf{C}}^*$-algebra. Provided $M$ is noncommutative, then ${\mathcal S}^{{\mathsf{ex}}}(M)$
is a submanifold of states.
\end{example}
\begin{proof}
According to Lemma \ref{char1} and Corollary \ref{extan},  $M$ is noncommutative if, and only if, ${\mathcal S}^{{\mathsf{ex}}}(M)$ is non-void. Hence, under the assumption of the assertion ${\mathcal S}^{{\mathsf{ex}}}(M)\not=\emptyset$.
 By Corollary \ref{extan}, an equivalent definition of ${\mathcal S}^{{\mathsf{ex}}}(M)$ reads
\begin{equation}\label{edef}
{\mathcal S}^{{\mathsf{ex}}}(M)=\bigl\{ \nu\in {\mathsf{ex}}\,{\mathcal S}(M): {\mathsf{T}}_\nu(M)\not=\{{\mathsf{0}}\}\bigr\}
\end{equation}
Thus, to given $\nu\in {\mathcal S}^{{\mathsf{ex}}}(M)$, let us fix some $f\in {\mathsf{T}}_\nu(M)\backslash\{{\mathsf{0}}\}$. According to Corollary \ref{auxpure2}, when restricting  the parameterized curve given by \eqref{tanfolg0b}
to all $t$ of sufficiently small modulus, then in view of \eqref{edef} an example of a parameterized curve $(\nu_t)$ obeying  $(\nu_t)\subset  {\mathcal S}^{{\mathsf{ex}}}(M)$ and passing through $\nu$ with $\nu_t^{\,\prime}|_{t=0}=f$ and  admitting a $\gamma$-compliant unital $^*$-representation around there is obtained.
\end{proof}
\end{subequations}
In context of formula \eqref{condinv} a canonical lifting procedure for submanifolds of states defined over some subalgebra of a ${\mathsf{C}}^*$-algebra arises. In order to fix a relevant context,
let $N\subset M$ be a non-trivial  ${\mathsf{C}}^*$-subalgebra of a unital ${\mathsf{C}}^*$-algebra $M$, both with common unit, such that $N$ can be obtained as image of $M$ under a projection map $\Phi$ of norm $1$.
\begin{lemma}\label{mannmap}
Let $\Omega\subset {\mathcal S}(N)$ be a submanifold of states over $N$. Then  \begin{equation}\label{mannmap0}
\Omega\circ\Phi=\bigl\{\nu\in {\mathcal S}^\Phi(M): \nu|N\in \Omega\bigr\}
\end{equation}
is fulfilled, and $\Omega\circ\Phi$
is a submanifold of states over $M$.
\end{lemma}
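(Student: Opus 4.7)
The plan is to verify the two assertions separately and in a straightforward manner, since the heavy lifting is already done in Corollary \ref{condexpect0}\,\eqref{condexpect0a}. First, I would establish the set-theoretic equality \eqref{mannmap0}. For the inclusion ``$\subset$'', take $\nu = \omega \circ \Phi$ with $\omega \in \Omega$; then, using the projection property $\Phi^2 = \Phi$, one has $\nu \circ \Phi = \omega \circ \Phi^2 = \omega \circ \Phi = \nu$, so $\nu \in \mathcal{S}^\Phi(M)$, and the restriction $\nu|N = \omega \circ (\Phi|N) = \omega \in \Omega$ since $\Phi$ fixes $N$ pointwise. Conversely, if $\nu \in \mathcal{S}^\Phi(M)$ with $\nu|N \in \Omega$, then by \eqref{condinv} we have $\nu = (\nu|N) \circ \Phi \in \Omega \circ \Phi$, which proves the other inclusion.

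The second assertion is that $\Omega \circ \Phi$ is a submanifold of states over $M$ in the sense of Definition \ref{submann}. Given $\nu \in \Omega \circ \Phi$, set $\omega = \nu|N \in \Omega$. Since $\Omega$ is by hypothesis a submanifold of states over $N$, there exists a parameterized curve $\gamma_0 = (\omega_t) \subset \Omega$ passing through $\omega = \omega_0$ and admitting a $\gamma_0$-compliant unital $^*$-representation $\{\pi_0, \mathcal{H}_{\pi_0}\}$ of $N$ around $\omega$. I now invoke Corollary \ref{condexpect0}\,\eqref{condexpect0a}, which states that the family $\gamma_0 \circ \Phi = (\omega_t \circ \Phi) \subset \mathcal{S}^\Phi(M)$ is then a parameterized curve passing through $\nu = \omega \circ \Phi$ and admitting a $\gamma_0 \circ \Phi$-compliant unital $^*$-representation of $M$ around $\nu$. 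Finally, each state $\omega_t \circ \Phi$ belongs to $\mathcal{S}^\Phi(M)$ and has restriction $\omega_t \in \Omega$, whence $\omega_t \circ \Phi \in \Omega \circ \Phi$ by \eqref{mannmap0}. This proves $\gamma_0 \circ \Phi \subset \Omega \circ \Phi$, so the defining requirement of a submanifold of states is satisfied at the arbitrary point $\nu \in \Omega \circ \Phi$.

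No genuine obstacle is expected, as all the analytical work (construction of the lifted $^*$-representation via Stinespring's theorem, injectivity of the lifted curve, differentiability of the implementing family) is already encapsulated in Corollary \ref{condexpect0}. The only mild care required is to check the two directions of \eqref{mannmap0} by explicit use of $\Phi^2 = \Phi$ and $\Phi|N = \mathrm{id}_N$, and to note that whenever $\gamma_0 \subset \Omega$ evolves in $\Omega$, the restrictions of the states $\omega_t \circ \Phi$ to $N$ automatically stay in $\Omega$, securing $\gamma_0 \circ \Phi \subset \Omega \circ \Phi$.
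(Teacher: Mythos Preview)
Your proposal is correct and follows essentially the same approach as the paper's proof: establish \eqref{mannmap0} (the paper simply cites \eqref{condinv} for this, whereas you spell out both inclusions via $\Phi^2=\Phi$ and $\Phi|N=\mathrm{id}_N$), then for arbitrary $\nu\in\Omega\circ\Phi$ pick a curve in $\Omega$ through $\nu|N$ using the submanifold hypothesis, lift it via Corollary \ref{condexpect0}\,\eqref{condexpect0a}, and note the lifted curve stays in $\Omega\circ\Phi$.
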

\begin{proof}
Formula \eqref{mannmap0} is a consequence of \eqref{condinv}.
Let $\nu\in\Omega\circ\Phi$ be arbitrarily chosen. Then, $\nu|N\in \Omega$ and $\nu\circ\Phi=\nu$ hold. Since $\Omega$ is a submanifold of states over $N$, there exists a parameterized curve $\gamma=(\nu_t)\subset \Omega\subset {\mathcal S}(N) $ passing through $\nu|N=\nu_0$ and admitting a $\gamma$-compliant unital $^*$-representation of $N$ around there. By the previous and in view of Corollary \ref{condexpect0}\,\eqref{condexpect0a}, $\hat{\gamma}=\gamma\circ\Phi$ then is a  parameterized curve passing through $\nu$ and admitting a $\hat{\gamma}$-compliant unital $^*$-representation of $M$ around there. By construction $\hat{\gamma}$ is evolving completely within $\Omega\circ\Phi$. Therefore, the requirements of Definition \ref{submann} are satisfied.
\end{proof}
\begin{example}\label{imbed}
Let $\Phi: M\rightarrow N$ be a projection map of norm $1$ projecting onto a ${\mathsf{C}}^*$-subalgebra $N$ of $M$ with  ${\mathsf{dim}}\,N>2$. Then, the set   ${\mathcal S}^\Phi(M)$ of states over $M$ left invariant under the dual action of $\Phi$ is a submanifold of states over $M$.
\end{example}
\begin{proof}
Let $\Omega={\mathcal S}(N)$. Application of Theorem \ref{mannig} in respect of $N$ yields that $\Omega$ is a submanifold of states over $N$. On the other hand, by formula \eqref{mannmap0}, in the special case at hand  $\Omega\circ\Phi={\mathcal S}^\Phi(M)$ is seen.  By Lemma \ref{mannmap} the assertion follows.
\end{proof}
\subsubsection{Submanifolds of normal states}\label{finaltanspeznorm}
In case if $M$ is a ${\mathsf{W}}^*$-algebra the   normal state space ${\mathcal S}_0(M)$ will be of special interest. In addition, some subsets of normal states like the set of faithful normal states, which is defined by
\begin{equation}\label{fstate}
\begin{split}
{\mathcal S}_0^{\mathsf{faithful}}(M) &=\bigl\{\varrho\in {\mathcal S}_0(M): \varrho(x^*x)>0,\,\forall x\in M\backslash \{{\mathsf{0}}\}\bigr\}\\& =\bigl\{\varrho\in {\mathcal S}_0(M): s(\varrho)={\mathsf{1}}\bigr\}
\end{split}
\end{equation}
and the normal variants $${\mathcal S}_0^{\mathsf{mix}}(M)={\mathcal S}^{\mathsf{mix}}(M)\cap {\mathcal S}_0(M),\ {\mathcal S}^{{\mathsf{ex}}}_0(M)={\mathcal S}^{{\mathsf{ex}}}(M)\cap {\mathcal S}_0(M)$$  of the subsets ${\mathcal S}^{\mathsf{mix}}(M)$ and ${\mathcal S}^{{\mathsf{ex}}}(M)$ which were  defined in context of Example \ref{mixedmann} and Example \ref{puremann}, respectively, will be considered.
On a  ${\mathsf{W}}^*$-algebra faithful normal states needn't exist. But in case of existence,  the set of faithful normal states on $M$ is a uniformly dense, affinely convex subset  of ${\mathcal S}_0(M)$.
Start with an auxiliary result which is an essential ingredient for all what follows in the ${\mathsf{W}}^*$-case.
\begin{lemma}\label{wstar}
Assume $M$ is a ${\mathsf{W}}^*$-algebra, with predual space $M_*$, and normal state space ${\mathcal S}_0(M)\subset M_*$. Then  ${\mathsf{T}}_\nu(M)\subset M_*$, for each
$\nu\in {\mathcal S}_0(M)$. Moreover, in case of non-trivial ${\mathsf{T}}_\nu(M)$,  for each $f\in {\mathsf{T}}_\nu(M)\backslash \{{\mathsf{0}}\}$ the  parameterized curve  passing through $\nu$ and obeying $\nu_t^{\,\prime}|_{t=0}=f$ and specified in accordance with  formula \eqref{tanfolg0b} is obeying $\gamma\subset {\mathcal S}_0(M)$.
\end{lemma}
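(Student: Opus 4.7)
The plan is to invoke Corollary \ref{tanfolg} with respect to a \emph{normal} unital $^*$-representation of $M$ and then exploit the fact that, relative to such a representation, every vector form $x \mapsto \langle \pi_0(x)\xi,\eta\rangle_{\pi_0}$ is automatically ultraweakly continuous, hence lies in $M_*$.

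First I would fix a normal unital $^*$-representation $\{\pi_0,{\mathcal H}_{\pi_0}\}$ of $M$ for which the $\pi_0$-fibre at $\nu$ is non-void. Since $\nu$ is a normal state on the ${\mathsf W}^*$-algebra $M$, its GNS-representation $\{\pi_\nu,{\mathcal H}_{\pi_\nu},\xi_\nu\}$ is a normal $^*$-homomorphism satisfying $\xi_\nu\in {\mathcal S}_{\pi_\nu,M}(\nu)$, so $\pi_0=\pi_\nu$ qualifies (equally well one can use any faithful normal representation of $M$, whose existence is a standard fact of ${\mathsf W}^*$-theory). Given $f\in {\mathsf T}_\nu(M)$, Corollary \ref{tanfolg}\,\eqref{tanfolg01} applied to $\pi_0$ and a chosen $\varphi_0\in {\mathcal S}_{\pi_0,M}(\nu)$ yields a unique $\hat{\xi}_0\in [\pi_0(M)_{\mathsf h}\varphi_0]$ with
\[
f(x)=\langle\pi_0(x)\hat{\xi}_0,\varphi_0\rangle_{\pi_0}+\langle\pi_0(x)\varphi_0,\hat{\xi}_0\rangle_{\pi_0},\ \forall x\in M.
\]
Since $\pi_0$ is normal, each of the two summands is the composition of the ultraweakly continuous map $\pi_0$ with a normal vector form on $\pi_0(M)$ (viewed in $\pi_0(M)^{\,\prime\prime}$), hence is itself ultraweakly continuous on $M$. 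Adding, one concludes $f\in M_*$, which establishes the inclusion ${\mathsf T}_\nu(M)\subset M_*$.

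For the second assertion, consider for $f\not={\mathsf 0}$ in ${\mathsf T}_\nu(M)$ the parameterized curve of \eqref{tanfolg0b},
\[
\nu_t=\frac{\nu+tf+t^2\omega}{1+t^2\|f\|_\nu^2},
\]
where, in the notation of Corollary \ref{tanfolg}\,\eqref{tanfolg02}, the positive linear form $\omega$ is implemented as $\omega(x)=\langle\pi_0(x)\hat{\xi}_0,\hat{\xi}_0\rangle_{\pi_0}$ for all $x\in M$. By the same normality argument $\omega\in M_{*,+}$, and by the first part $f\in M_*$. Hence the numerator $\nu+tf+t^2\omega$ is a normal hermitian linear form, and division by the strictly positive real $1+t^2\|f\|_\nu^2$ preserves normality. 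That each $\nu_t$ is actually a state is guaranteed by the construction recalled around \eqref{gff0}--\eqref{gff3}, where $\nu_t$ appears as the state implemented by the unit vector $(\varphi_0+t\hat{\xi}_0)/\sqrt{1+t^2\|\hat{\xi}_0\|_{\pi_0}^2}$ in respect of the normal $\pi_0$. Consequently $\nu_t\in {\mathcal S}_0(M)$ for every $t$ in the parameter interval, that is $\gamma\subset {\mathcal S}_0(M)$, as claimed.

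The only genuine subtlety is the choice of a normal representation with non-void fibre at $\nu$; once this is in place, everything else reduces to standard normality/continuity bookkeeping. No separate estimates, no new constructions are required — the argument rests entirely on Corollary \ref{tanfolg} together with the fact that vector forms relative to a normal representation of a ${\mathsf W}^*$-algebra automatically belong to $M_*$.
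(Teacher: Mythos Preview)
Your proof is correct, and it takes a slightly different route from the paper's own argument. Both begin by invoking Corollary \ref{tanfolg} to write
\[
f(x)=\langle\pi_0(x)\hat{\xi}_0,\varphi_0\rangle_{\pi_0}+\langle\pi_0(x)\varphi_0,\hat{\xi}_0\rangle_{\pi_0},\qquad
\omega(x)=\langle\pi_0(x)\hat{\xi}_0,\hat{\xi}_0\rangle_{\pi_0},
\]
but then diverge in how normality is extracted. You choose $\pi_0$ to be a \emph{normal} representation (the GNS representation of the normal state $\nu$), and then normality of $f$ and $\omega$ is immediate because vector forms relative to a normal representation of a ${\mathsf W}^*$-algebra are automatically in $M_*$. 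The paper, by contrast, does not insist on a normal $\pi_0$: for $f$ it uses the Cauchy--Schwarz estimate $|f(x)|\leq 2\|f\|_\nu\sqrt{\nu(x^*x)}$ together with normality of $\nu$ (vanishing on decreasing nets of projections) to force $f\in M_*$; for $\omega$ it approximates $\hat{\xi}_0$ by $\pi_0(x_k)\varphi_0$ with $x_k\in M_{\mathsf h}$, rewrites $\omega$ as a $\|\cdot\|_1$-limit of the normal forms $\nu(x_k^*(\cdot)x_k)$, and invokes norm-closedness of $M_*$. Your argument is shorter and more transparent once one grants the standard fact that the GNS representation of a normal state is normal; the paper's argument is more self-contained in that it uses only the normality of $\nu$ itself and works for any admissible $\pi_0$.
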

\begin{proof}
We suppose the context as in Corollary \ref{tanfolg}, with $f\in {\mathsf{T}}_\nu(M)$, but relating to  $\nu\in {\mathcal S}_0(M)$. In view of  \eqref{tanfolg0}, the Cauchy-Schwarz inequality for $x\in M$ implies
\[
| f(x)|\leq 2\,\|f\|_\nu \, \sqrt{\nu(x^*x)}
\]
By normality of $\nu$, for each decreasingly ordered directed net $\{p_\lambda\}$ of orthoprojections $p_\lambda\in M$ with greatest lower bound $\mathsf{g.l.b.}\{p_\lambda\}={\mathsf{0}}$ one has $\lim_\lambda \nu(p_\lambda)=0$. Thus, by the above estimate, $\lim_\lambda |f(p_\lambda)|=0$, for each such net. Hence $f\in M_*$.\\ Let $\omega$ be the positive linear form implemented by $\hat{\xi}_0$ via $\pi_0$. Owing to  $\hat{\xi}_0\in [\pi_0(M)_{\mathsf{h}}\varphi_0]$ there exists a sequence $\{x_n\}\subset M_{\mathsf{h}}$ such that $\hat{\xi}_0=\lim_{k\to\infty}\pi_0(x_k)\varphi_0$.
From this  $\omega=\|\cdot\|_1-\lim_{k\to\infty} \nu(x_k^*(\cdot)x_k)$
is obtained. Now, by normality of $\nu$, each of the positive linear forms $\nu(x_k^*(\cdot)x_k)$ has to be normal, too. Thus, since $M_*$ is a  $\|\cdot\|_1$-closed linear subspace of $M^*$ normality of $\omega$ follows. Hence, due to normality of $\nu$, $f$ and $\omega$, each of the parameterized curves  $\gamma=(\nu_t)$ constructed according to formula \eqref{tanfolg0b} yields a  parameterized curve  $\gamma$  passing through $\nu$ and exhibiting  $f\not={\mathsf{0}}$ as tangent form at $\nu$, and which  in addition is obeying  $\gamma\subset {\mathcal S}_0(M)$.
\end{proof}
\begin{theorem}\label{wstar1}
Let $M$ be a ${\mathsf W}^*$-algebra. ${\mathcal S}_0(M)$ is a submanifold of states if, and only if, $\#\, {\mathsf{ex}}\,{\mathcal S}(M)>2$ is fulfilled.
\end{theorem}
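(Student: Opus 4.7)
The plan is to mimic closely the proof of Theorem \ref{mannig} but to verify at every step that the constructed curves stay inside the normal state space ${\mathcal S}_0(M)$. For necessity, I will argue just as in the first paragraph of the proof of Theorem \ref{mannig}: if $\#\,{\mathsf{ex}}\,{\mathcal S}(M)\leq 2$, then $M$ is at most two-dimensional and commutative; in that case ${\mathcal S}_0(M)={\mathcal S}(M)$ is either a single point or an affine segment spanned by its two extremal states, and no parameterized curve (which by definition must be injective in a neighbourhood of the state it passes through) can be threaded through an extremal endpoint. Hence ${\mathcal S}_0(M)$ fails the requirements of Definition \ref{submann}.

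For sufficiency I fix a normal state $\nu\in{\mathcal S}_0(M)$ and split into two cases according to whether $\nu$ is a character of $M$ or not. In the non-character case, Corollary \ref{extan} yields a non-zero $f\in{\mathsf{T}}_\nu(M)$. I then feed this $f$ into the explicit construction \eqref{tanfolg0b} of Corollary \ref{tanfolg}\,\eqref{tanfolg02}, choosing for $\pi_0$ a normal unital $^*$-representation of $M$ (e.g.~the normal GNS representation of $\nu$, which exists because $\nu$ is normal). By Lemma \ref{wstar}, with respect to such a normal $\pi_0$ the linear form $f$ lies in $M_*$ and the positive form $\omega$ implemented by $\hat{\xi}_0$ is normal as well, so every convex combination appearing in \eqref{tanfolg0b} is normal; the resulting curve is a parameterized curve through $\nu$, sits entirely in ${\mathcal S}_0(M)$, and has $\pi_0$ as a $\gamma$-compliant representation.

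In the character case, I invoke Lemma \ref{char1/2} to conclude that $z=s(\nu)\in M$ is a central minimal orthoprojection, giving a direct decomposition $M=\mathbb{C}z\oplus Mz^\perp$. The hypothesis $\#\,{\mathsf{ex}}\,{\mathcal S}(M)>2$ forces ${\mathsf{dim}}\,Mz^\perp\geq 2$, so the W${}^*$-algebra $Mz^\perp$ admits at least two distinct normal states $\varrho_0,\mu_0$; extending these by zero on $\mathbb{C}z$ produces distinct normal states $\varrho,\mu\in{\mathcal S}_0(M)$ with $s(\varrho),s(\mu)\leq z^\perp$. Choosing $\pi$ to be the (normal) direct sum of the GNS representations of $\nu,\varrho,\mu$, I pick implementing unit vectors $\varphi,\xi,\eta$ for these three states; centrality and minimality of $z$ force $z\varphi=\varphi$ and $z\xi=z\eta={\mathsf 0}$, exactly as in \eqref{mannig2cn}. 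Then I define the piecewise curve of \eqref{mannig2an}–\eqref{mannig2bn}, whose states are convex combinations of $\nu$ with $\varrho$ or $\mu$ and hence normal, and whose injectivity is read off the values $\nu_t(z)=1-t^2$ together with the reversal of partner ($\varrho$ vs $\mu$) when $t$ changes sign.

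The main obstacle will be this character case, because Lemma \ref{wstar} is no longer available (${\mathsf{T}}_\nu(M)=\{{\mathsf 0}\}$ by Corollary \ref{extan}); the issue is to produce two distinct normal states orthogonal to $\nu$ and to realize them, together with $\nu$, as vector states of one and the same normal representation so that the piecewise implementation \eqref{mannig2an} makes sense and inherits normality of the implemented states. Once normality of the representation is secured via the direct sum of GNS representations, the verifications of differentiability at $t=0$ and of injectivity are essentially the same as in the proof of Lemma \ref{submannele}.
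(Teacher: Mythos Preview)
Your proposal is correct and follows essentially the same route as the paper: necessity via the dimensional trivialities, and sufficiency by splitting into the non-character case (handled by Lemma \ref{wstar} and the curve \eqref{tanfolg0b}) and the normal-character case (handled by Lemma \ref{char1/2} applied directly to $M$, producing a central minimal $z=s(\nu)$ and two distinct normal states supported under $z^\perp$, then the piecewise construction \eqref{mannig2an}--\eqref{mannig2bn}). One small remark: your insistence on a \emph{normal} $\pi_0$ in the non-character case is harmless but unnecessary---Lemma \ref{wstar} derives normality of $f$ and $\omega$ from normality of $\nu$ alone, independently of how $\pi_0$ is chosen.
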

\begin{proof}
A unital ${\mathsf C}^*$-algebra $M$ violating the condition $\#\, {\mathsf{ex}}\,{\mathcal S}(M)>2$ is a commutative ${\mathsf W}^*$-algebra with ${\mathsf{dim}}\,M\leq 2$. Since any state there is normal, the assertion will fail at each $\nu\in {\mathsf{ex}}\,{\mathcal S}(M)\subset {\mathcal S}_0(M)$, by literally the same reasoning as given at the beginning of the proof of Theorem \ref{mannig}. Thus, the condition is necessary.

 Suppose $M$ is a ${\mathsf{W}}^*$-algebra obeying $\#\, {\mathsf{ex}}\,{\mathcal S}(M)>2$. By Lemma \ref{wstar}, through any normal state $\nu\in {\mathcal S}_0(M)$ with ${\mathsf{T}}_\nu(M)\not=\{{\mathsf{0}}\}$ a parameterized curve $\gamma\subset {\mathcal S}_0(M)$ is passing which is admitting a $\gamma$-compliant unital $^*$-representation around $\nu$. Thus,
 the outstanding cases are states  $\nu\in {\mathcal S}_0(M)$ with ${\mathsf{T}}_\nu(M)=\{{\mathsf{0}}\}$. Put that case at $\nu$. By Corollary \ref{extan}, $\nu$ is a normal character state.
 By Lemma \ref{char1/2}, the support orthoprojection  $z_0=s(\nu)$ is minimal in $M$,  with  $x z_0=z_0 x$, for all $x\in M$. Thus,
 \begin{equation}\label{minop}
 M={\mathbb{C}} z_0+Mz_0^\perp
 \end{equation} Clearly, by assumption there have to exist at least another two pure states on $M$. Necessarily, these pure states have to be vanishing on $z_0$. Hence, $Mz_0^\perp$ has to be a non-trivial  ${\mathsf{W}}^*$-algebra. Thus, we can be assured that over $M$ even another two normal states $\mu\not=\varrho$ exist, with support orthoprojections obeying $s(\mu)\leq z_0^\perp$ and $s(\varrho)\leq z_0^\perp$, respectively. Now, suppose $\{\pi,{\mathcal H}_\pi\}$ to be a unital $^*$-representation of $M$ such that implementing vectors $\varphi\in {\mathcal S}_{\pi,M}(\varrho)$, $\eta \in {\mathcal S}_{\pi,M}(\mu)$ and $\varphi\in {\mathcal S}_{\pi,M}(\nu)$ exist. Let $N=\pi(M)^{\,\prime\prime}$, and be $\nu_\pi$ the state implemented by $\varphi$ on $N$. Then, since $\nu$ is a character state on $M$,  $\nu_\pi$ is a normal character state over $N$. By Lemma \ref{char1/2}, $s(\nu_\pi)$ is a minimal orthoprojection of $N$. In view of \eqref{minop} and the previous,  $z=\pi(z_0)$ is an orthoprojection obeying $z\in N\cap N^{\,\prime}$,   ${\mathsf{0}}<z<{\mathsf{1}}$,  and   $\pi(M)={\mathbb{C}}z+\pi(M)z^\perp$. Hence, $N={\mathbb{C}}z+N z^\perp$ holds. That is, $z$ is a minimal orthoprojection of $N$.  Note that owing to $\nu_\pi(z)=\nu(z_0)=1$ we have $z\geq s(\nu_\pi)$. By minimality of $z$ and $s(\nu_\pi)$ from this  $z=s(\nu_\pi)$ follows. Thus we have $z\varphi=\varphi$, and from $0=\varrho(z_0)$ then $0=\varrho(z_0)=\|\pi(z_0)\xi\|_\pi^2=\|z\xi\|_\pi^2$ is obtained, that is, $z\xi={\mathsf{0}}$ is fulfilled. Analogously, we see that $z\eta={\mathsf{0}}$. Thus, we have arrived at the condition  \eqref{mannig2cn} again. Therefore the construction procedure for implementing vectors given by  \eqref{mannig2an} can be applied in the situation at hand as well.  Since the states $\nu$, $\varrho$ and $\mu$ fed into formula  \eqref{mannig2bn} are normal states on $M$, the states $\nu_t$ implemented all are normal states, and thus a parameterized curve $\gamma$ passing through $\nu$ is generated, but which now in addition is obeying $\gamma\subset {\mathcal S}_0(M)$,  with  $\gamma$-compliant representation $\{\pi,{\mathcal H}_\pi\}$ around $\nu$.
 \end{proof}
 \newpage
 \begin{corolla}\label{wstar2}
 Let $M\not={\mathbb{C}}\,{\mathsf{1}}$ be a ${\mathsf{W}}^*$-algebra.
 The following subsets of normal states are submanifolds:
 \begin{enumerate}
 \item \label{wstar2a}
 ${\mathcal S}_0^{\mathsf{mix}}(M)$;
 \item \label{wstar2b}
   ${\mathcal S}_0^{\mathsf{ex }}(M)$, provided this set is non-void \textup{(}then $M$ is non-commutative\textup{)};
 \item \label{wstar2c}
 ${\mathcal S}^{\mathsf{faithful}}_0(M)$, provided faithful normal states exist;
 \item \label{wstar2d}
 $\Omega_M(\varrho)$ if, and only if, $\varrho\in {\mathcal S}_0^{\mathsf{mix}}(M)$.
 \end{enumerate}
 \end{corolla}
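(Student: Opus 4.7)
The plan is to adapt the constructions of Example \ref{mixedmann}, Example \ref{puremann} and Theorem \ref{substrat} to the normal setting, with Lemma \ref{wstar} serving as the bridge that keeps the standard test-curve \eqref{tanfolg0b} inside ${\mathcal S}_0(M)$. In each case, given a state $\nu$ of the subset in question, I would first verify ${\mathsf{T}}_\nu(M)\neq\{{\mathsf{0}}\}$, then fix some $f\in {\mathsf{T}}_\nu(M)\backslash\{{\mathsf{0}}\}$, apply Corollary \ref{tanfolg}\,\eqref{tanfolg02} to produce a parameterized curve $\gamma=(\nu_t)$ through $\nu$ with $\nu_t^{\,\prime}|_{t=0}=f$ admitting a $\gamma$-compliant $^*$-representation, and finally restrict the parameter interval so that the curve stays inside the prescribed subset.

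For \eqref{wstar2a}, at any $\nu\in {\mathcal S}_0^{\mathsf{mix}}(M)$ Lemma \ref{extriv} gives a non-zero $f\in {\mathsf{T}}_\nu(M)$; by Lemma \ref{wstar} the resulting curve lies in ${\mathcal S}_0(M)$, and by Corollary \ref{auxpure1} it stays mixed on some subinterval $J$ around $0$, hence $\gamma|J\subset {\mathcal S}_0^{\mathsf{mix}}(M)$. For \eqref{wstar2b}, if $\nu\in {\mathcal S}_0^{{\mathsf{ex}}}(M)$ then by Corollary \ref{extan} we have ${\mathsf{T}}_\nu(M)\neq\{{\mathsf{0}}\}$ (a rephrasing of \eqref{edef}); Lemma \ref{wstar} again keeps the curve in ${\mathcal S}_0(M)$, and Corollary \ref{auxpure2} guarantees $\nu_t\in {\mathsf{ex}}\,{\mathcal S}(M)$ with ${\mathsf{T}}_{\nu_t}(M)\neq\{{\mathsf{0}}\}$ for $|t|$ small, so $\gamma|J\subset {\mathcal S}_0^{{\mathsf{ex}}}(M)$. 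Non-emptyness of ${\mathcal S}_0^{{\mathsf{ex}}}(M)$ forces $M$ to be noncommutative by Corollary \ref{extan} and Lemma \ref{char1}.

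For \eqref{wstar2d}, the two directions split cleanly. If $\varrho\in {\mathcal S}_0^{\mathsf{mix}}(M)$, then $\varrho\not\in {\mathsf{ex}}\,{\mathcal S}(M)$, so Theorem \ref{substrat} furnishes the needed curve through any $\nu\in \Omega_M(\varrho)$; Lemma \ref{setprop}\,\eqref{setprop4} ensures $\Omega_M(\varrho)\subset {\mathcal S}_0(M)$ automatically, so no further work is needed to stay among normal states (in fact, inspecting the construction in the proof of Theorem \ref{substrat}, the curve \eqref{substrat23} is already built from normal data). Conversely, if $\Omega_M(\varrho)$ is a submanifold it contains a parameterized curve through $\varrho$, which is injective near $t=0$, so $\#\,\Omega_M(\varrho)>1$; Corollary \ref{multistrat} then forces $\varrho\not\in {\mathsf{ex}}\,{\mathcal S}(M)$, i.e.\ $\varrho\in {\mathcal S}_0^{\mathsf{mix}}(M)$. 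Finally, \eqref{wstar2c} reduces to \eqref{wstar2d}: by Corollary \ref{standform} the relation $\dashv\vdash$ between normal states coincides with equality of supports, so ${\mathcal S}^{\mathsf{faithful}}_0(M)=\Omega_M(\varrho)$ for any faithful normal $\varrho$; since $M\neq {\mathbb C}{\mathsf{1}}$ forces the support ${\mathsf{1}}$ to be non-minimal, a faithful normal state is mixed, and \eqref{wstar2d} applies.

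The one mildly delicate point I expect is the bookkeeping in \eqref{wstar2d}: one must make sure that the construction used in the proof of Theorem \ref{substrat} (in particular the implementation \eqref{substrat3} on ${\mathcal H}_\pi\oplus {\mathcal H}_\pi$) yields a curve in ${\mathcal S}_0(M)$, not merely in ${\mathcal S}(M)$. This follows from Lemma \ref{setprop}\,\eqref{setprop4} together with the fact that $\mu_t$ in \eqref{substrat2} is an affine combination of normal states and hence normal; no separate appeal to Lemma \ref{wstar} is required, but it is worth verifying explicitly so the normality persists along the whole curve rather than just infinitesimally.
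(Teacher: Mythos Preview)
Your proposal is correct and, for parts \eqref{wstar2a}, \eqref{wstar2b} and \eqref{wstar2d}, follows essentially the same route as the paper's own proof: the special curve \eqref{tanfolg0b}, kept normal via Lemma \ref{wstar}, combined respectively with Corollary \ref{auxpure1}, Corollary \ref{auxpure2}, and Theorem \ref{substrat}/Lemma \ref{setprop}\,\eqref{setprop4}.

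For \eqref{wstar2c} you diverge from the paper. The paper argues directly: a faithful normal $\nu$ cannot be a character (Lemma \ref{char1/2}\,\eqref{char1/2b} would force $M={\mathbb C}{\mathsf 1}$), so ${\mathsf T}_\nu(M)\neq\{{\mathsf 0}\}$; it then takes the curve \eqref{tanfolg0b} inside ${\mathcal S}_0(M)$ and reruns the construction \eqref{substrat23} from the proof of Theorem \ref{substrat}, observing that $\mu_t=t^2\nu+(1-t^2)\nu_t\geq t^2\nu$ inherits faithfulness from $\nu$. Your reduction of \eqref{wstar2c} to \eqref{wstar2d} via ${\mathcal S}^{\mathsf{faithful}}_0(M)=\Omega_M(\varrho)$ (Corollary \ref{standform}) is cleaner and avoids this repetition; it is in fact the argument the paper uses later, at \eqref{stratbed} in Example \ref{faithst}. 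The only spot to tighten is your sentence ``non-minimal support implies mixed'': rather than invoking support projections, simply note that ${\mathcal S}^{\mathsf{faithful}}_0(M)$ contains more than one state when $M\neq{\mathbb C}{\mathsf 1}$ (e.g.\ $\varrho$ and $\tfrac12(\varrho+\omega)$ for any normal $\omega\neq\varrho$), so $\#\,\Omega_M(\varrho)>1$ and Corollary \ref{multistrat} gives $\varrho\notin{\mathsf{ex}}\,{\mathcal S}(M)$ directly.
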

 \begin{proof}
For $\nu\in {\mathcal S}_0^{\mathsf{mix}}(M)$ and   $f\in {\mathsf{T}}_\nu(M)\backslash \{{\mathsf{0}}\}$ by  Example \ref{mixedmann} we have $\gamma|J\subset {\mathcal S}^{\mathsf{mix}}(M)$, with $\gamma$ being the special  parameterized curve \eqref{tanfolg0b} and some parameter interval $J$ around $0$. But since according to Lemma \ref{wstar} for $\nu\in {\mathcal S}_0(M)$ we even have $\gamma\subset {\mathcal S}_0(M)$, in view of the previous $\gamma|J\subset {\mathcal S}_0^{\mathsf{mix}}(M)$ follows.  This is \eqref{wstar2a}.

Remark that, in view of the facts  mentioned at the beginning of the proof of Example \ref{puremann}, ${\mathcal S}_0^{\mathsf{ex }}(M)\not=\emptyset$ will imply  $M$ to be non-commutative. Also, if $\nu\in {\mathcal S}_0^{\mathsf{ex }}(M)$, then ${\mathsf{T}}_\nu(M)\not=\{{\mathsf{0}}\}$, by \eqref{edef}. Thus,  \eqref{wstar2b} follows from Lemma \ref{wstar}.

To see \eqref{wstar2c}, suppose $\nu$ to be a faithful normal state on $M$, that is,   $\nu$ is a normal state with support orthoprojection $s(\nu)={\mathsf{1}}$. Note that  $\nu$ cannot be a character state, for otherwise from Lemma \ref{char1/2}\,\eqref{char1/2b} we had to conclude to $x=\nu(x){\mathsf{1}}$, for each $x\in M$. This would contradict the assumption  $M\not={\mathbb{C}}\,{\mathsf{1}}$. Thus, by Corollary \ref{extan}, ${\mathsf{T}}_\nu(M)\not=\{{\mathsf{0}}\}$ follows. Let us fix a unital $^*$-representation $\{\pi,{\mathcal H}_\pi\}$ with non-trivial $\pi$-fibre of $\nu$.   Let $f\in {\mathsf{T}}_\nu(M)\backslash \{{\mathsf{0}}\}$, arbitrarily chosen, and as mentioned in Lemma \ref{wstar}, be $\gamma$ the special parameterized curve $\gamma=(\nu_t)\subset {\mathcal S}_0(M)$ passing through $\nu$ and obeying $\nu_t^{\,\prime}|_{t=0}=f$. Then,  $\{\pi,{\mathcal H}_\pi\}$ is $\gamma$-compliant, i.e.~ there is a differentiable map $t\mapsto\varphi_t\in {\mathcal S}_{\pi,M}(\nu_t)$.
Application of the construction schema \eqref{substrat23} mentioned on in the proof of Theorem \ref{substrat} with input data $\nu,\nu_t$ and $\varphi,\varphi_t$  now yields normal states $\mu_t$ and unit vectors $\xi_t$.  Thus, by $(\mu_t)$ now a parameterized curve of normal states passing through $\nu$ and exhibiting $f$ as tangent form at $\nu$ is defined. Also, since $\nu$ is faithful, we even have $(\mu_t)\subset {\mathcal S}^{\mathsf{faithful}}_0(M)$. Moreover, in respect of the direct sum representation $\{\pi,{\mathcal H}_\pi\}\oplus \{\pi,{\mathcal H}_\pi\}$, by $t\mapsto \xi_t$  a continuously differentiable implementation of $(\mu_t)$ around $\nu=\mu_0$ is defined.  Hence, there exists a parameterized curve  $(\mu_t)$ of faithful normal states passing through $\nu=\mu_0$ and obeying  $\mu_t^{\,\prime}|_{t=0}=f$ and which is admitting a $\gamma$-compliant unital $^*$-representation around $\nu$. This especially means that the assertion of \eqref{wstar2c} holds true.

Relating \eqref{wstar2d}, note that according to Lemma \ref{setprop}\,\eqref{setprop4} the inclusion $\Omega_M(\varrho)\subset {\mathcal{S}}_0(M)$ takes place iff $\varrho\in {\mathcal{S}}_0(M)$ is fulfilled. Hence, the assertion then appears as an immediate consequence  of Theorem \ref{substrat}.
\end{proof}
A situation as follows often will be met: for two ${\mathsf{W}}^*$-algebras $M$ and $N$ obeying ${\mathsf{1}}\in N\subset M$ we suppose there exists a normal projection map  $\Phi$ of norm $1$ obeying $\Phi(M)=N$.  Due to normality, in addition to \eqref{condinv}, let us consider now the set
\begin{equation}\label{condinv0}
{\mathcal S}_0^\Phi(M)=\bigl\{\nu\in {\mathcal S}_0(M): \nu\circ \Phi=\nu\bigr\}={\mathcal S}_0(N)\circ\Phi
\end{equation}
and which then is a subset of normal states.
In addition to Lemma \ref{mannmap}, a  ${\mathsf{W}}^*$-variant referring to normal states and to \eqref{condinv0} exists and reads as following.
\begin{lemma}\label{mannmapnor}
Let $\Omega\subset {\mathcal S}_0(N)$ be a submanifold of normal states over $N$. Then  \begin{equation}\label{mannmapnor0}
\Omega\circ\Phi=\bigl\{\nu\in {\mathcal S}_0^\Phi(M): \nu|N\in \Omega\bigr\}
\end{equation}
is fulfilled, and $\Omega\circ\Phi$
is a submanifold of normal states over $M$.
\end{lemma}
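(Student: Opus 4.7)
The plan is to mimic the structure of the proof of Lemma \ref{mannmap} and then inject the single additional ingredient that distinguishes the $\mathsf{W}^*$-situation, namely the normality of $\Phi$. First I would settle the set-theoretic identity \eqref{mannmapnor0}: if $\nu\in\mathcal{S}_0^\Phi(M)$ with $\nu|N\in\Omega$, then by \eqref{condinv0} we have $\nu=(\nu|N)\circ\Phi\in\Omega\circ\Phi$; conversely any $\nu=\mu\circ\Phi$ with $\mu\in\Omega\subset\mathcal{S}_0(N)$ satisfies $\nu\circ\Phi=\mu\circ\Phi^2=\mu\circ\Phi=\nu$, and $\nu$ is normal since $\nu$ is the composition of the normal functional $\mu$ on $N$ with the normal map $\Phi:M\to N$; also $\nu|N=\mu\in\Omega$ by the projection property $\Phi|N=\mathrm{id}_N$.

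Next, to verify Definition \ref{submann}, fix an arbitrary $\nu\in\Omega\circ\Phi$. Then $\mu=\nu|N\in\Omega$, and since $\Omega$ is a submanifold of normal states over $N$, there exists a parameterized curve $\gamma:I\ni t\mapsto\nu_t\in\Omega\subset\mathcal{S}_0(N)$ with $\nu_0=\mu$ and a $\gamma$-compliant unital $^*$-representation $\{\pi,\mathcal{H}_\pi\}$ of $N$ around $\mu$. I would now invoke Corollary \ref{condexpect0}\,\eqref{condexpect0a}: the lifted parameterized family $\hat{\gamma}=\gamma\circ\Phi=(\nu_t\circ\Phi)$ is a parameterized curve in $\mathcal{S}^\Phi(M)$ passing through $\nu=\nu_0\circ\Phi$, and it admits a $\hat{\gamma}$-compliant unital $^*$-representation of $M$ around $\nu$ (constructed via Stinespring's dilation of $\pi\circ\Phi$).

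The only point not already handled by Corollary \ref{condexpect0} is that $\hat{\gamma}$ should lie inside $\Omega\circ\Phi$ \emph{and} consist of normal states. For each $t$ the functional $\nu_t\circ\Phi$ is the composition of the normal state $\nu_t\in\mathcal{S}_0(N)$ with the normal linear map $\Phi:M\to N$, hence is ultraweakly continuous and therefore belongs to $\mathcal{S}_0(M)$; this is precisely where the $\mathsf{W}^*$-hypothesis on $\Phi$ enters, replacing the merely norm-continuous projection of Lemma \ref{mannmap}. Combined with $\nu_t\circ\Phi\circ\Phi=\nu_t\circ\Phi$ this gives $\hat{\gamma}\subset\mathcal{S}_0^\Phi(M)$, and since $\nu_t\in\Omega$ for all $t$, the identity \eqref{mannmapnor0} applied pointwise yields $\hat{\gamma}\subset\Omega\circ\Phi$, completing the verification that $\Omega\circ\Phi$ meets Definition \ref{submann} at $\nu$.

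I do not expect a serious obstacle here: the heavy lifting (Stinespring dilation, injectivity of the lifted curve, compliance of the lifted representation) is already packaged in Corollary \ref{condexpect0}, and the normality transfer is a one-line composition argument. The only detail that warrants a careful sentence in the write-up is reminding the reader that normality of $\Phi$ propagates through the composition $\nu_t\mapsto\nu_t\circ\Phi$, which is exactly the feature distinguishing Lemma \ref{mannmapnor} from its ${\mathsf C}^*$-algebraic predecessor Lemma \ref{mannmap}.
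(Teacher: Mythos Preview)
Your proposal is correct and matches the paper's approach: the paper does not give a standalone proof but simply refers to Remark \ref{condexpectrem}, which says that the normal-state versions of Corollary \ref{condexpect0} go through, so that the proof of Lemma \ref{mannmap} can be repeated verbatim. Your explicit handling of the normality step (via composition of the normal functional $\nu_t$ with the normal map $\Phi$) is exactly the content hidden behind that remark.
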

The proof
gets obvious if the following remark is taken into account.
\begin{remark}\label{condexpectrem}
Lemma \ref{wstar} implies the assertions of
Lemma \ref{condexpect} and Corollary \ref{condexpect0} to remain true in respect of normal states when in the suppositions ${\mathcal S}^\Phi(M)$ is replaced by ${\mathcal S}_0^\Phi(M)$, and
${\mathcal S}^\Phi(M)$, ${\mathcal S}(N)$ are replaced by ${\mathcal S}_0^\Phi(M)$, ${\mathcal S}_0(N)$, respectively.
\end{remark}
In the above situation, if the normal projection map $\Phi$ in addition is faithful, this is of greatest  significance. Here, `faithful' refers to the property that, for $x\in M$,
  \begin{equation}\label{fproj}
  \Phi(x^*x)={\mathsf{0}}\,\Longleftrightarrow \,x={\mathsf{0}}
  \end{equation}
holds. Subsequently, refer to such $\Phi$
  as `conditional expectation' from $M$ onto $N$.
\begin{corolla}\label{imbed0}
Let $M$, $N$  be  ${\mathsf{W}}^*$-algebras, with $M\supset N\ni {\mathsf{1}}$, together with a conditional expectation $\Phi$ acting from $M$ onto $N$.
Suppose $M$ admits faithful normal states, and $N$ is non-trivial, that is   $N\not=\mathbb{C}\,{\mathsf{1}}$. Then, the subset
\begin{equation}\label{imbed00}
{\mathcal{S}}_0^{\Phi(M)}\cap {\mathcal S}_0^{\mathsf{faithful}}(M)=\bigl\{\nu\in {\mathcal S}_0^{\mathsf{faithful}}(M): \nu\circ\Phi=\nu\bigr\}
\end{equation}
is a submanifold of faithful normal states of $M$.
\end{corolla}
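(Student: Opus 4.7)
The plan is to reduce to the already-treated case of faithful normal states on $N$ (Corollary \ref{wstar2}\,\eqref{wstar2c}) and then lift admissible curves on $N$ through $N$ to admissible curves on $M$ via the dual action of $\Phi$, as in Lemma \ref{mannmapnor} and Remark \ref{condexpectrem}. First I would verify that the set in question is non-empty: fix any faithful normal state $\omega\in {\mathcal S}_0^{\mathsf{faithful}}(M)$, which exists by assumption, and consider $\omega\circ \Phi$. This is normal (since $\Phi$ is normal) and trivially $\Phi$-invariant, and faithfulness is inherited because $(\omega\circ\Phi)(x^*x)=0$ forces $\Phi(x^*x)=0$ by faithfulness of $\omega$, and then $x={\mathsf 0}$ by faithfulness of $\Phi$.

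Next, take an arbitrary $\nu\in {\mathcal S}_0^{\Phi}(M)\cap {\mathcal S}_0^{\mathsf{faithful}}(M)$ and consider $\nu_0:=\nu|N\in {\mathcal S}_0(N)$. For $x\in N$ with $\nu_0(x^*x)=0$ one has $\nu(x^*x)=0$, so $x={\mathsf 0}$ by faithfulness of $\nu$ on $M$; hence $\nu_0$ is a faithful normal state on $N$. Since $N\not={\mathbb C}\,{\mathsf 1}$ and $N$ admits faithful normal states, Corollary \ref{wstar2}\,\eqref{wstar2c} provides a parameterized curve $\gamma=(\nu_t)\subset {\mathcal S}_0^{\mathsf{faithful}}(N)$ passing through $\nu_0$ and admitting a $\gamma$-compliant unital $^*$-representation of $N$ around $\nu_0$.

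I would then lift $\gamma$ by the dual action of $\Phi$. By Corollary \ref{condexpect0}\,\eqref{condexpect0a}, the family $\hat{\gamma}=\gamma\circ\Phi=(\nu_t\circ\Phi)$ is a parameterized curve in ${\mathcal S}^\Phi(M)$ passing through $\nu=\nu_0\circ\Phi$ and admitting a $\hat{\gamma}$-compliant unital $^*$-representation of $M$ around $\nu$; by Remark \ref{condexpectrem} (the normal variant) this lift stays within the normal state space, so $\hat{\gamma}\subset {\mathcal S}_0^\Phi(M)$. To finish, one has to promote the inclusion to ${\mathcal S}_0^\Phi(M)\cap {\mathcal S}_0^{\mathsf{faithful}}(M)$: this is where faithfulness of $\Phi$ is essential. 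For each $t$ and each $x\in M$ with $(\nu_t\circ\Phi)(x^*x)=0$, positivity of $\Phi(x^*x)\in N_+$ and faithfulness of $\nu_t$ on $N$ give $\Phi(x^*x)={\mathsf 0}$, whence $x={\mathsf 0}$ by \eqref{fproj}. Thus every $\nu_t\circ\Phi$ is a faithful normal state on $M$, so $\hat{\gamma}\subset {\mathcal S}_0^\Phi(M)\cap {\mathcal S}_0^{\mathsf{faithful}}(M)$, and Definition \ref{submann} is satisfied at $\nu$.

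The one genuinely non-routine point is this last preservation-of-faithfulness step, where one must invoke \eqref{fproj}; without faithfulness of $\Phi$ the lifted curve would only land in ${\mathcal S}_0^\Phi(M)$ and could leave the faithful part. Everything else is a straightforward assembly of the already-proved results: Corollary \ref{wstar2}\,\eqref{wstar2c} for the existence of curves on $N$, Corollary \ref{condexpect0}\,\eqref{condexpect0a} together with Remark \ref{condexpectrem} for lifting them to $M$ while keeping $\Phi$-invariance and normality, and Definition \ref{submann} for the conclusion.
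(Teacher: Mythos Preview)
Your argument is correct and follows essentially the same line as the paper's proof: both use Corollary \ref{wstar2}\,\eqref{wstar2c} to obtain admissible curves in ${\mathcal S}_0^{\mathsf{faithful}}(N)$, lift them through $\Phi$ via Corollary \ref{condexpect0}\,\eqref{condexpect0a}/Remark \ref{condexpectrem}, and invoke faithfulness \eqref{fproj} of $\Phi$ to ensure the lifted states remain faithful on $M$. The only organizational difference is that the paper first establishes the set identity ${\mathcal S}_0^\Phi(M)\cap {\mathcal S}_0^{\mathsf{faithful}}(M)={\mathcal S}_0^{\mathsf{faithful}}(N)\circ\Phi$ and then applies Lemma \ref{mannmapnor} globally, whereas you verify Definition \ref{submann} pointwise; the substance is identical.
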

\begin{proof}
Since ${\mathcal S}_0^{\mathsf{faithful}}(M)\not=\emptyset$ holds and $N$ is non-trivial, we have $$\emptyset\not={\mathcal S}_0^{\mathsf{faithful}}(M)|N\subset {\mathcal S}_0^{\mathsf{faithful}}(N)$$
Also, from this and $\Phi\circ\Phi=\Phi$ one concludes that
\[
{\mathcal{S}}_0^\Phi(M)\cap {\mathcal S}_0^{\mathsf{faithful}}(M)\subset \big\{\nu\in {\mathcal{S}}_0(M): \nu\circ\Phi=\nu,\, \nu|N\in
{\mathcal S}_0^{\mathsf{faithful}}(N)\bigr\}={\mathcal S}_0^{\mathsf{faithful}}(N)\circ\Phi
\]
Thus, by Corollary \ref{wstar2}\,\eqref{wstar2c} when applied with respect to $N$,  $\Omega={\mathcal S}_0^{\mathsf{faithful}}(N)$ is a submanifold of normal states of $N$. Hence, application of Lemma \ref{mannmapnor} yields that
\[
{\mathcal S}_0^{\mathsf{faithful}}(N)\circ\Phi=\big\{\nu\in {\mathcal{S}}_0(M): \nu\circ\Phi=\nu,\, \nu|N\in
{\mathcal S}_0^{\mathsf{faithful}}(N)\bigr\}
\]
is a submanifold of normal states of $M$. Note that, for each $\nu\in {\mathcal S}_0^{\mathsf{faithful}}(N)\circ\Phi$, for  $x\in M$ with $\nu(x^*x)=0$ we infer $\nu(\Phi(x^*x))=0$. From this and owing to  $\Phi(x^*x)\in N_+$ and $\nu|N\in
{\mathcal S}_0^{\mathsf{faithful}}(N)$ we infer that  $\Phi(x^*x)={\mathsf{0}}$. By faithfulness, $x={\mathsf{0}}$ follows. Thus,
\[
{\mathcal S}_0^{\mathsf{faithful}}(N)\circ\Phi\subset {\mathcal S}_0^{\mathsf{faithful}}(M)
\]
has to be fulfilled. On the other hand, from this we have that for $\nu\in {\mathcal S}_0(M)$ with $\nu\circ\Phi=\nu$ and $\nu|N\in
{\mathcal S}_0^{\mathsf{faithful}}(N)$ also $\nu=\nu\circ\Phi\in {\mathcal S}_0^{\mathsf{faithful}}(M)$ is fulfilled.  That is,
\[
{\mathcal S}_0^{\mathsf{faithful}}(N)\circ\Phi\subset {\mathcal{S}}_0^\Phi(M)\cap {\mathcal S}_0^{\mathsf{faithful}}(M)
\]
Since in view of the above also the reverse inclusion holds, equality has to occur
\[
{\mathcal S}_0^{\mathsf{faithful}}(N)\circ\Phi= {\mathcal{S}}_0^\Phi(M)\cap {\mathcal S}_0^{\mathsf{faithful}}(M)
\]
between the subset of states defined in  \eqref{imbed00} and the subset ${\mathcal S}_0^{\mathsf{faithful}}(N)\circ\Phi$  which has been seen to be a submanifold of states.
\end{proof}
As an application,
let us consider a finite or countably infinite, strictly ascendingly directed system  ${\mathcal P}=\{P_k\}$ of orthoprojections $P_k$ over a Hilbert space ${\mathcal H}$, with ${\mathsf{dim}}\,P_k {\mathcal H}<\infty$ for all $k\in {\mathbb{N}}$, and obeying ${\mathsf{l.u.b.}} \{P_k\}={\mathsf 1}$.  We then are going to consider the subset
\[
\Omega_{{\mathcal P}}({\mathcal H})\subset {\mathcal S}_0^{\mathsf{faithful}}({\mathsf B}({\mathcal H}))
\]
dealt with in Example \ref{Bopex}.
On $M={\mathsf{B}}({\mathcal H})$, a linear mapping $\Phi$ be defined by
\begin{subequations}\label{imb}
\begin{equation}\label{imbed00ex}
\Phi(x)=\sum_k \Delta_k P\, x\, \Delta_k P,\,\forall\,x\in M,
\end{equation}
with $\Delta_1 P=P_1$ and $\Delta_k P=P_{k}-P_{k-1}$, for $k>1$ in case $\#\, \mathcal{P}>1$. $\Phi$ is a conditional expectation with fixed point $vN$-subalgebra
$
N_{\mathcal P}={\sum_k}^\oplus {\mathsf{B}}(\Delta_k P \,{\mathcal H})
$, and thus is satisfying
\begin{equation}\label{imbed1ex}
\Omega_{{\mathcal P}}({\mathcal H})={\mathcal S}_0^{\mathsf{faithful}}(N_{\mathcal{P}})\circ\Phi={\mathcal S}_0^{\mathsf{faithful}}(M)\circ\Phi
\end{equation}
\end{subequations}
in respect of $\Omega_{{\mathcal P}}({\mathcal H})$. Therefore, since in case of ${\mathsf{dim}}\, {\mathcal{H}}>1$ and  for each ${\mathcal P}$ the premises of Corollary \ref{imbed0} in respect of $M={\mathsf{B}}({\mathcal H})$ and  $N=N_{\mathcal P}$ are satisfied, in view of the previous and  Example \ref{Bopex}, the application of  Corollary \ref{imbed0}  yields the following result.
\begin{example}\label{imbed0ex}
In respect of the Bures metric the affinely convex subset $\Omega_{{\mathcal P}}({\mathcal H})$ is a geodesically convex submanifold of states over ${\mathsf{B}}({\mathcal H})$, with ${\mathsf{dim}}\, {\mathcal{H}}>1$.
\end{example}
\subsubsection{Local tangent spaces  relative to a  submanifold of states}\label{finaltanspez1}
 From now on suppose ${\mathsf{dim}}\,M > 2$ for the ${\mathsf{C}}^*$-algebra $M$ under consideration. Then, according to Theorem \ref{mannig}/Theorem \ref{wstar1}, the full state space, respectively the normal state space if $M$ is a ${\mathsf{W}}^*$-algebra, are submanifolds of states in the sense of Definition \ref{submann}.
\begin{definition}\label{loctan}
Let $\Omega$ be a submanifold of states, $\nu\in \Omega$. The `local tangent space at $\nu$ (relative to $\Omega$)' ${\mathsf{T}}\Omega_\nu$ is defined to contain the ${\mathsf{0}}$-form and each bounded linear form over $M$ arising as a tangent form at $\nu$ of a  parameterized curve $\gamma\subset \Omega$  passing through $\nu$  and admitting a $\gamma$-compliant unital $^*$-representation around there.
\end{definition}
\begin{lemma}\label{fulltsp}
Let $M$ be a  ${\mathsf{C}}^*$-algebra as above. The following hold:
\begin{enumerate}
\item\label{fulltsp1}
${{\mathsf{T}}{\mathcal S}(M)_\nu}\phantom{_0}={\mathsf{T}}_\nu(M)$, for all $\nu\in  {\mathcal S}(M)$;
\item\label{fulltsp2}
${\mathsf{T}}{\mathcal S}_0(M)_\nu\,={\mathsf{T}}_\nu(M)$, for all $\nu\in  {\mathcal S}_0(M)$, if $M$ is a ${\mathsf{W}}^*$-algebra.
\end{enumerate}
\end{lemma}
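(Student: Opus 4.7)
The plan is to prove both equalities by a two-sided inclusion, with one direction reducing to Theorem \ref{Cstern} and the other to Corollary \ref{tanfolg} together with Lemma \ref{wstar}.

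For the inclusion ${\mathsf{T}}{\mathcal S}(M)_\nu \subseteq {\mathsf{T}}_\nu(M)$ in \eqref{fulltsp1}, I would argue as follows. By Definition \ref{loctan} any nonzero $f \in {\mathsf{T}}{\mathcal S}(M)_\nu$ arises as the tangent form at $\nu$ of some parameterized curve $\gamma \subset {\mathcal S}(M)$ passing through $\nu$ and admitting a $\gamma$-compliant unital $^*$-representation around there. That is exactly the hypothesis of Theorem \ref{Cstern}\,\eqref{Cstern3}, whose equivalence with \eqref{Cstern1} is precisely the statement $f \in {\mathsf{T}}_\nu(M)$. The zero form is included in both sets by convention. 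The same argument works verbatim for \eqref{fulltsp2}, since a curve lying in ${\mathcal S}_0(M)$ is in particular a curve in ${\mathcal S}(M)$.

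For the reverse inclusion ${\mathsf{T}}_\nu(M) \subseteq {\mathsf{T}}{\mathcal S}(M)_\nu$ in \eqref{fulltsp1}, fix $f \in {\mathsf{T}}_\nu(M) \setminus \{{\mathsf 0}\}$ and choose any unital $^*$-representation $\{\pi_0, {\mathcal H}_{\pi_0}\}$ with ${\mathcal S}_{\pi_0,M}(\nu) \not= \emptyset$; such a representation exists by the GNS construction. Apply Corollary \ref{tanfolg}\,\eqref{tanfolg02} to obtain the explicit parameterized curve
\[
\gamma:\,[-1,1]\ni t\longmapsto \nu_t=\frac{\nu+t f + t^2 \omega}{1+t^2\|f\|_\nu^2}
\]
constructed via \eqref{tanfolg0b}. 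By that corollary $\gamma$ passes through $\nu = \nu_0$, exhibits $f$ as its tangent form at $\nu$, and admits $\{\pi_0, {\mathcal H}_{\pi_0}\}$ as a $\gamma$-compliant representation; in fact $\gamma \subset {\mathcal S}^{\pi_0}(M) \subset {\mathcal S}(M)$. Hence $f \in {\mathsf{T}}{\mathcal S}(M)_\nu$.

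For the reverse inclusion in \eqref{fulltsp2}, let $\nu \in {\mathcal S}_0(M)$ and $f \in {\mathsf{T}}_\nu(M) \setminus \{{\mathsf 0}\}$. The only additional point to check is that the curve $\gamma$ above stays inside ${\mathcal S}_0(M)$, not just ${\mathcal S}(M)$. This is exactly the content of Lemma \ref{wstar}: since $\nu$ is normal, $f$ automatically lies in $M_*$, and the positive linear form $\omega$ appearing in \eqref{tanfolg0b} is a uniform limit of normal forms of the type $\nu(x_k^*(\cdot)x_k)$, hence normal. Consequently each $\nu_t$, being a convex-type combination of normal forms, is normal, and $\gamma \subset {\mathcal S}_0(M)$. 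Therefore $f \in {\mathsf{T}}{\mathcal S}_0(M)_\nu$, completing the proof. No serious obstacle is expected: the work was done in establishing Theorem \ref{Cstern}, Corollary \ref{tanfolg} and Lemma \ref{wstar}, and this lemma is essentially a bookkeeping statement that the full (normal) state space is large enough to realize every abstract tangent form as the derivative of some compliantly implementable curve.
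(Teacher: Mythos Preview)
Your proof is correct and follows essentially the same approach as the paper: both directions rest on Theorem \ref{Cstern}\,\eqref{Cstern1}$\Leftrightarrow$\eqref{Cstern3}, the explicit curve of Corollary \ref{tanfolg}\,\eqref{tanfolg02}, and Lemma \ref{wstar} for the normal case. The paper organizes the argument by separating out the character-state case ${\mathsf{T}}_\nu(M)=\{{\mathsf 0}\}$ explicitly (via Corollary \ref{extan}), whereas your two-sided inclusion handles it implicitly; otherwise the reasoning is identical.
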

\begin{proof}
By Theorem \ref{mannig}, ${\mathcal S}(M)_\nu$
is a submanifold of states. Also, as mentioned in the proof, for $\nu\in  {\mathcal S}(M)$ satisfying ${\mathsf{T}}_\nu(M)\not=\{{\mathsf{0}}\}$,
to each $f\in {\mathsf{T}}_\nu(M)\backslash \{{\mathsf{0}}\}$ a parameterized curve $\gamma$ passing through $\nu$ and  exhibiting $f$ as tangent form at $\nu$ exists, such that each unital $^*$-representation $\{\pi,{\mathcal H}_\pi\}$ with non-trivial $\pi$-fibre of $\nu$ is $\gamma$-compliant. This in view of Definition \ref{loctan} and Theorem \ref{Cstern}\,\eqref{Cstern1}\,$\Leftrightarrow$\eqref{Cstern3} implies
$
{\mathsf{T}}{\mathcal S}(M)_\nu\backslash \{{\mathsf{0}}\}={\mathsf{T}}_\nu(M)\backslash \{{\mathsf{0}}\}
$. Therefore, since ${\mathsf{T}}_\nu(M)$ is a linear space and ${\mathsf{0}}$ by definition belongs to ${\mathsf{T}}{\mathcal S}(M)_\nu$, we have
$
{\mathsf{T}}{\mathcal S}(M)_\nu={\mathsf{T}}_\nu(M)
$ for all $\nu$ obeying ${\mathsf{T}}_\nu(M)\not=\{{\mathsf{0}}\}$. Also, by Corollary \ref{extan}, if a state $\nu$ with ${\mathsf{T}}_\nu(M)=\{{\mathsf{0}}\}$ exists,  then  ${\mathsf{T}}{\mathcal S}(M)_\nu\supset{\mathsf{T}}_\nu(M)=\{{\mathsf{0}}\}$ has to be fulfilled. However, in assuming existence of $f\not={\mathsf{0}}$  such that  $f\in {\mathsf{T}}{\mathcal S}(M)_\nu$ by Theorem \ref{Cstern} would have to imply that $f\in {\mathsf{T}}_\nu(M)$, a contradiction. Hence,
$
{\mathsf{T}}{\mathcal S}(M)_\nu={\mathsf{T}}_\nu(M)
$
for those cases, too. Thus \eqref{fulltsp1} holds.  Moreover, if Lemma \ref{wstar} and Theorem \ref{wstar1} are taken into account, an analogous line of reasoning (we omit the details) will show  that \eqref{fulltsp2} holds.
\end{proof}
\begin{remark}\label{fulltansp1}
According to Corollary \ref{compT}, when equipped with the $\|\cdot\|_\nu$-norm each ${\mathsf{T}}{\mathcal S}(M)_\nu$ (resp.\,${\mathsf{T}}{\mathcal S}_0(M)_\nu$) is a (real) Banach space.
\end{remark}
Further examples of local tangent spaces to some elementary but important special cases of submanifolds $\Omega\subset {\mathcal S}(M)$ are listed in the following:
\begin{example}\label{furthex}
Let $M$ be a unital ${\mathsf{C}}^*$-algebra, $\{\pi,{\mathcal H}_\pi\}$  a unital $^*$-representation of $M$. Suppose $\Omega$ can be  each of ${\mathcal S}(M)$, ${\mathcal S}^\pi(M)$,
${\mathcal S}^{\mathsf{mix}}(M)$, and
${\mathcal S}^{\mathsf{ex}}(M)$, provided the latter are submanifolds of states, by respective conditions. Then,
\begin{equation}\label{furthex1}
{\mathsf{T}}\Omega_\nu= {\mathsf{T}}_\nu(M),\ \forall \nu\in \Omega,
\end{equation}
is fulfilled.
If $M$ is a ${\mathsf{W}}^*$-algebra,  in addition to the above \eqref{furthex1} remains true also if  $\Omega$ can stand for each of ${\mathcal S}_0(M)$,
${\mathcal S}_0^{\mathsf{mix}}(M)$,
${\mathcal S}_0^{\mathsf{ex }}(M)$, and  ${\mathcal S}_0^{\mathsf{faithful}}(M)$, provided $M$ by respective conditions is such that the latter are  submanifolds of normal states.
\end{example}
\begin{proof}
For ${\mathcal S}(M)$, ${\mathcal S}_0(M)$ with ${\mathsf{dim}}\,M> 2$ the result follows from Lemma \ref{fulltsp}. For the other cases and conditions, under which \eqref{furthex1} holds, see Lemma \ref{submannele}, Example \ref{mixedmann}, Example \ref{puremann} and the respective proofs, and Corollary \ref{wstar2} in the  ${\mathsf{W}}^*$-case, respectively. In each case, the line of reasoning goes along the argument by means of Theorem \ref{Cstern} as raised in the proof of Lemma \ref{fulltsp}.
\end{proof}
Let ${\mathsf{1}}\in N\subset M$, with ${\mathsf{C}}^*$-algebras $N$ and $M$ with common unit ${\mathsf{1}}$, and with  $N=\Phi(M)$, for a projection map $\Phi$ of norm one. Then the following holds.
\begin{lemma}\label{proex}
Let $\Omega\subset {\mathcal{S}}(N)$ be a submanifold of states over $N$. Then,
\begin{equation}\label{proex1}
{\mathsf{T}}(\Omega\circ\Phi)_\nu={\mathsf{T}}\Omega_{\nu|N}\circ\Phi
\end{equation}
is fulfilled, for each $\nu\in \Omega\circ\Phi$.
\end{lemma}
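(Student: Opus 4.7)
The plan is to prove both inclusions by direct appeal to Corollary \ref{condexpect0}, combined with the identification $\Omega\circ\Phi=\{\mu\in{\mathcal S}^\Phi(M):\mu|N\in\Omega\}$ from Lemma \ref{mannmap}. The zero form lies in both sides by Definition \ref{loctan} and $0=0\circ\Phi$, so the discussion can be restricted to non-zero forms throughout.

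To establish $\supset$, I would take any $g\in{\mathsf{T}}\Omega_{\nu|N}\setminus\{{\mathsf 0}\}$, realized as the tangent form at $\nu|N$ of a parameterized curve $\gamma=(\nu_t)\subset\Omega$ admitting some $\gamma$-compliant unital $^*$-representation of $N$, and apply Corollary \ref{condexpect0}\,\eqref{condexpect0a}. This lifts $\gamma$ to $\hat\gamma=\gamma\circ\Phi=(\nu_t\circ\Phi)\subset{\mathcal S}^\Phi(M)$, a parameterized curve passing through $\nu$ and admitting a $\hat\gamma$-compliant unital $^*$-representation of $M$. Since each $\nu_t\circ\Phi$ restricts to $\nu_t\in\Omega$ on $N$, the characterization in Lemma \ref{mannmap} forces $\hat\gamma\subset\Omega\circ\Phi$, and its tangent form at $\nu$ is $g\circ\Phi$; hence $g\circ\Phi\in{\mathsf{T}}(\Omega\circ\Phi)_\nu$.

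For $\subset$, I would start from $f\in{\mathsf{T}}(\Omega\circ\Phi)_\nu\setminus\{{\mathsf 0}\}$, realized via some $\hat\gamma=(\varrho_t)\subset\Omega\circ\Phi$ with a $\hat\gamma$-compliant $\{\pi,{\mathcal H}_\pi\}$ implemented by a differentiable family $\varphi_t\in{\mathcal S}_{\pi,M}(\varrho_t)$. Since $f\in{\mathsf{T}}_\nu(M)$ by Theorem \ref{Cstern} and $\hat\gamma\subset{\mathcal S}^\Phi(M)$, the necessity half of Corollary \ref{condexpect0}\,\eqref{condexpect0b} yields $f=f\circ\Phi$, whence $f=g\circ\Phi$ with $g=f|N$. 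It remains to exhibit a curve in $\Omega$ realizing $g$ as tangent form at $\nu|N$. I would take $\mu_t=\varrho_t|N\in\Omega$; injectivity of $t\mapsto\mu_t$ near $0$ follows from $\varrho_t=\mu_t\circ\Phi$, and for $x\in N$ the identity $\Phi(x)=x$ gives $\langle\pi(x)\varphi_t,\varphi_t\rangle_\pi=\varrho_t(x)=\mu_t(x)$, so that the restricted $^*$-representation $\{\pi|N,{\mathcal H}_\pi\}$ (which is unital on $N$ since ${\mathsf 1}\in N$) is $\gamma$-compliant with the same implementing family $(\varphi_t)$ and has tangent form $g$ at $\nu|N$. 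Thus $g\in{\mathsf{T}}\Omega_{\nu|N}$ and $f=g\circ\Phi\in{\mathsf{T}}\Omega_{\nu|N}\circ\Phi$.

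The substantive work is entirely absorbed into Corollary \ref{condexpect0}, whose part \eqref{condexpect0a} rests on a Stinespring dilation of the composed completely positive map $\pi\circ\Phi$ and whose part \eqref{condexpect0b} encodes the necessary invariance $f=f\circ\Phi$. The only obstacle to watch in the present argument is essentially bookkeeping — checking that restriction of the ambient representation to $N$ together with restriction of the curve preserves every structural requirement of Definition \ref{submann} and Definition \ref{loctan} — which is automatic because $\Phi$ is a projection of norm one fixing $N$ pointwise.
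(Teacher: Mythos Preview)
Your proof is correct and follows essentially the same approach as the paper: both directions rest on Corollary \ref{condexpect0}, with \eqref{condexpect0a} giving the inclusion $\supset$ via lifting curves and \eqref{condexpect0b} giving $\subset$ via the invariance $f=f\circ\Phi$ followed by restricting the curve and the representation to $N$. Your injectivity argument for the restricted curve (via $\varrho_t=\mu_t\circ\Phi$) is in fact slightly more explicit than the paper's, which instead passes through the observation $f|N\not={\mathsf 0}$; both routes are valid.
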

\begin{proof}
By assumption about $\Omega$ and Lemma \ref{mannmap}, $\Omega\circ\Phi\subset  {\mathcal{S}}(M)$ is a submanifold of states on $M$. Also note that the following is fulfilled
\begin{equation}\label{proex2}
 \Omega\circ\Phi=\bigl\{\varrho\in {\mathcal{S}}^\Phi(M): \varrho|N\in \Omega\bigr\}
\end{equation}
 with ${\mathcal{S}}^\Phi(M)$ defined in \eqref{condinv}. Suppose $\nu\in \Omega\circ\Phi$. Then, by Corollary \ref{condexpect0}\,\eqref{condexpect0a} and in view of Definition \ref{loctan},  we may  conclude that
 \[
 {\mathsf{T}}\Omega_{\nu|N}\circ\Phi\subset  {\mathsf{T}}(\Omega\circ\Phi)_\nu
 \]
 On the other hand, in view of Example \ref{furthex},  $\Omega\circ\Phi\subset  {\mathcal{S}}(M)$ is implying  ${\mathsf{T}}(\Omega\circ\Phi)_\nu\subset {\mathsf{T}}{\mathcal{S}}(M)_\nu={\mathsf{T}}_\nu(M)$. Thus, for $f\in {\mathsf{T}}(\Omega\circ\Phi)_\nu$ we either have $f={\mathsf{0}}$, in which case $f$  belongs to ${\mathsf{T}}\Omega_{\nu|N}\circ\Phi$ by triviallity, or we have $f\not={\mathsf{0}}$  and then ${\mathsf{T}}_\nu(M)\not=\{{\mathsf{0}}\}$ is fulfilled. Also, since $\Omega\circ\Phi$ is a submanifold of states with $\nu\in \Omega\circ\Phi$, there exists a parameterized curve $\hat{\gamma}=(\varrho_t)\subset \Omega\circ\Phi\subset {\mathcal{S}}^\Phi(M)$ passing through $\nu=\varrho_0$, with a $\hat{\gamma}$-compliant unital $^*$-representation around $\nu$, and which in addition is obeying $\varrho_t^{\,\prime}|_{t=0}=f$. Thus, Corollary \ref{condexpect0}\,\eqref{condexpect0b} applies and yields that $f\circ\Phi=f$. Note that from this and $f\not={\mathsf{0}}$, together with $\Phi(x)=x$ for $x\in N$, $f|N\not={\mathsf{0}}$ can be inferred to hold. On the other hand, in view of \eqref{proex2} we have $\varrho_t|N\subset \Omega$, for each $t$. From this together with the previous  $(\varrho_t|N)^{\,\prime}|_{t=0}=f|N\not={\mathsf{0}}$ follows. Hence, the parameterized curve $\gamma=\hat{\gamma}|N\subset \Omega$ is passing through $\nu|N=\varrho_0|N$, and admits as $\gamma$-compliant $^*$-representation the restriction onto $N$ of the $\hat{\gamma}$-compliant $^*$-representation, and is exhibiting $f|N$ as tangentform at $\nu|N$. Accordingly, $f|N\in {\mathsf{T}}\Omega_{\nu|N}$ has to be fulfilled. From this we infer $f=f\circ\Phi\in {\mathsf{T}}\Omega_{\nu|N}\circ\Phi$.
 In summarizing, the following inclusion relation is fulfilled:
 \[
  {\mathsf{T}}(\Omega\circ\Phi)_\nu\subset {\mathsf{T}}\Omega_{\nu|N}\circ\Phi
  \]
As previously mentioned, since also the reverse inclusion holds, \eqref{proex1} is seen.
\end{proof}
\begin{remark}\label{proexrem}
In case of ${\mathsf{W}}^*$-algebras with $N\subset M$ and common unit the assertion of Lemma \ref{proex} remains true if $\Omega\subset {\mathcal{S}}_0(N)$ is a submanifold of normal states over $N$, and if  $\Phi$ is a normal projection map of norm one obeying $N=\Phi(M)$. But in order to see this, instead of Lemma \ref{mannmap}, \eqref{condinv} and formula \eqref{mannmap0}  in the proof one merely has to make reference to Lemma \ref{mannmapnor}, \eqref{condinv0} and formula \eqref{mannmapnor0}, respectively.
\end{remark}
Further non-trivial cases of local tangent spaces to certain submanifolds of states
can be constructed along Lemma \ref{proex}/Remark \ref{proexrem}. Here comes a first example.
\begin{example}\label{proexex}
Let $M$ be a ${\mathsf{W}}^*$-algebra with faithful normal states, and be  $\Phi$ a conditional expectation projecting onto a ${\mathsf{W}}^*$-subalgebra $N$, with $N\not={\mathbb{C}}\,{\mathsf{1}}$. Then,
\begin{equation}\label{proexex1}
{\mathsf{T}}\bigl\{\varrho\in {\mathcal S}_0^{\mathsf{faithful}}(M): \varrho\circ\Phi=\varrho\bigr\}_\nu=\bigl\{f\in {\mathsf{T}}_\nu(M): f\circ\Phi=f\bigr\}
\end{equation}
for each $\nu \in {\mathcal S}_0^{\mathsf{faithful}}(M)$ obeying $\nu\circ\Phi=\nu$.
\end{example}
\begin{proof}
By Corollary \ref{wstar2} and Corollary  \ref{imbed0} both subsets ${\mathcal S}_0^{\mathsf{faithful}}(N)$ and  $$\bigl\{\varrho\in {\mathcal S}_0^{\mathsf{faithful}}(M): \varrho\circ\Phi=\varrho\bigr\}={\mathcal S}_0^{\mathsf{faithful}}(N)\circ\Phi$$ are submanifolds of faithful normal states over $N$ and $M$, respectively. Thus, application of Lemma \ref{proex}/Remark \ref{proexrem} in respect of $\Omega={\mathcal S}_0^{\mathsf{faithful}}(N)$ and with  $\nu \in {\mathcal S}_0^{\mathsf{faithful}}(M)$ obeying $\nu\circ\Phi=\nu$ implies that
\begin{equation*}
{\mathsf{T}}\bigl({\mathcal S}_0^{\mathsf{faithful}}(N)\circ\Phi\bigr)_\nu=
{\mathsf{T}}{\mathcal S}_0^{\mathsf{faithful}}(N)_{\nu|N}\circ\Phi
\end{equation*}
Since ${\mathsf{T}}{\mathcal S}_0^{\mathsf{faithful}}(N)_{\nu|N}= {\mathsf{T}}_{\nu|N}(N)$ by Example \ref{furthex}, by Lemma \ref{condexpect}
\[
{\mathsf{T}}\bigl({\mathcal S}_0^{\mathsf{faithful}}(N)\circ\Phi\bigr)_\nu=
{\mathsf{T}}_\nu(M)\circ\Phi=\bigl\{f\in {\mathsf{T}}_\nu(M): f\circ\Phi=f\bigr\}
\]
follows. This in view of the above yields \eqref{proexex1}.
\end{proof}
For ${\mathsf{dim}}\,{\mathcal{H}}>1$, by means of  formula \eqref{proexex1} the local tangent space at a state $\nu$ of the submanifold $\Omega_{{\mathcal P}}({\mathcal H})$, which has been identified in context of Example \ref{imbed0ex},  can be calculated easily. In fact, if $\Phi$ is the conditional expectation of \eqref{imbed00ex} based on  $\mathcal{P}$, in view of \eqref{imbed1ex}   formula \eqref{proexex1} can be applied immediately.
\begin{example}\label{proexex2}
For each $\nu \in {\mathcal S}_0^{\mathsf{faithful}}({\mathsf{B}}({\mathcal{H}}))$ obeying $\nu\circ\Phi=\nu$ one has
\[
{\mathsf{T}} \Omega_{{\mathcal P}}({\mathcal H})_\nu=\bigl\{f\in {\mathsf{T}}_\nu({\mathsf{B}}({\mathcal{H}})) : f\circ\Phi=f\bigr\}
\]
with the conditional expectation $\Phi$ given by \eqref{imbed00ex}.
\end{example}
In order to consider ${\mathsf{T}}\Omega_\nu$ if $\Omega$ is a stratum, $\Omega=\Omega_M(\omega)$, by Theorem  \ref{substrat}, $M\not=\mathbb{C}\,{\mathsf{1}}$  and  $\omega\in {\mathcal{S}}^{\mathsf{mix}}(M)$ must be supposed.
Let $\nu\in \Omega$ with non-trivial $\pi$-fibre relative to a unital $^*$-representation $\{\pi,{\mathcal H}_\pi\}$, and be $\varphi\in {\mathcal{S}}_{\pi,M}(\nu)$. By
Theorem \ref{Cstern}\,\eqref{Cstern2},  to $f\in {\mathsf{T}}_\nu(M)$ there is a unique $\hat{\psi}_0\in [\pi(M)_{\mathsf{h}}\varphi]$ obeying $f(\cdot)=\langle\pi(\cdot)\hat{\psi}_0,\varphi\rangle_\pi+\langle\pi(\cdot)\varphi,\hat{\psi}_0\rangle_\pi$ and  $\|f\|_\nu=\|\hat{\psi}_0\|_\pi$.
In terms of these data the following holds:
\begin{example}\label{tanstrat}
$f\in {\mathsf{T}}\Omega_\nu$ is fulfilled if, and only if, there exists $\{\pi,{\mathcal H}_\pi\}$ with non-trivial $\pi$-fibre of $\nu$ such that $\hat{\psi}_0\in p_\pi(\varphi){\mathcal{H}}_\pi$ holds, in respect of $\varphi\in {\mathcal{S}}_{\pi,M}(\nu)$.
\end{example}
\begin{proof}
For $f\not={\mathsf{0}}$, $\hat{\psi}_0\not={\mathsf{0}}$. Thus, if  $\hat{\psi}_0\in p_\pi(\varphi){\mathcal{H}}_\pi$ is fulfilled, then $\psi=\hat{\psi}_0$ meets all the requirements of \eqref{substrat12}. Hence $f\in {\mathsf{T}}\Omega_\nu$, by Corollary \ref{extsubstrat} and Definition \ref{submann}.

Suppose $f\in {\mathsf{T}}\Omega_\nu$, with $f\not={\mathsf{0}}$. Then, according to Definition \ref{submann} there has to exist   a parameterized curve $\gamma=(\nu_t)\subset \Omega$ passing through $\nu=\nu_0$ and   satisfying  $f=\nu_t^{\,\prime}|_{t=0}$, and admitting a $\gamma$-compliant $^*$-representation $\{\pi,{\mathcal{H}}_\pi\}$ around $\nu$. In line with this, let $t\mapsto \varphi_t\in {\mathcal{S}}_{\pi,M}(\nu_t)$, with $\varphi=\varphi_0$, be a differentiable implementation of $\gamma$ around $\nu$ in respect of $\{\pi,{\mathcal{H}}_\pi\}$. Since  for all $t$ we have $\nu_t\in \Omega$ and $s((\nu_t)_{\pi})=p_{\pi}(\varphi_t)$, in view of \eqref{stratum0} and Theorem \ref{suppabsstet} we get $p_{\pi}(\varphi_t)=p_{\pi}(\varphi)$, for all $t$. Hence, $f(\cdot)=\langle\pi(\cdot)\psi,\varphi\rangle_\pi+\langle\pi(\cdot)\varphi,\psi\rangle_\pi$ holds, with  $\psi=\varphi_t^{\,\prime}|_{t=0}\in p_{\pi}(\varphi){\mathcal{H}}_\pi$. Let $\hat{\psi}_0\in [\pi(M)_{\mathsf{h}}\varphi]$ be the unique element in  $[\pi(M)_{\mathsf{h}}\varphi]=[N_{\mathsf{h}}\varphi]$, with $N=\pi(M)^{\,\prime\prime}$,  generating the same $f$ when taking $\hat{\psi}_0$ instead of $\psi$ in sense of Theorem \ref{Cstern}\,\eqref{Cstern2} in the formula of $f$. Note that for each $\xi\in  [N_{\mathsf{h}}\varphi]$ and owing to $p_\pi(\varphi)\in N$ we have that  $p_\pi(\varphi)\xi$ belongs to $[N_{\mathsf{h}}\varphi]$, too. Especially, therefore we have that  $\hat{\psi}_0, p_\pi(\varphi)\hat{\psi}_0\in [N_{\mathsf{h}}\varphi]$ holds. Note that owing to $p_\pi(\varphi) \psi=\psi$ the following is fulfilled:
\begin{eqnarray*}
\|\psi-\hat{\psi}_0\|^2_\pi &= &\|p_\pi(\varphi)(\psi-\hat{\psi}_0)\|^2_\pi+\|p_\pi(\varphi)^\perp(\psi-\hat{\psi}_0)\|^2_\pi\\
&=&\|\psi-p_\pi(\varphi)\hat{\psi}_0\|^2_\pi+\|p_\pi(\varphi)^\perp \hat{\psi}_0\|^2_\pi
\end{eqnarray*}
Hence, $\|\psi-\hat{\psi}_0\|_\pi\geq \|\psi-p_\pi(\varphi)\hat{\psi}_0\|_\pi$ follows.
Since $\xi=\hat{\psi}_0$ is the $\|\psi-\xi\|_\pi$-minimizing element for $\xi$ with range in $[N_{\mathsf{h}}\varphi]$, see  Remark \ref{bestapprox}\,\eqref{bestapprox2}, by uniqueness from the previous   $\hat{\psi}_0=p_\pi(\varphi)\hat{\psi}_0$ is obtained, that is, $f\in {\mathsf{T}}\Omega_\nu$ implies $\hat{\psi}_0\in p_\pi(\varphi) {\mathcal{H}}_\pi$.
\end{proof}
\begin{remark}\label{tanstratrem}
By standard arguments (see the proof of  Corollary \ref{tanfolg}, e.g.) the condition in Example \ref{tanstrat} with a particular set of input data $\{\pi,\varphi,\hat{\psi}_0\}$  holds  if, and only if, it is satisfied with any other set of input data referring to the same $f$.
\end{remark}
\newpage
\section{Conclusions, examples and applications}\label{conappl}
For a unital ${\mathsf C}^*$-algebra $M$,  with $M\not={\mathbb{C}}{\mathsf{1}}$, let $\gamma$ be a parameterized curve
 \begin{subequations}\label{3tang0}
 \begin{equation}\label{3tang0a}
 \gamma: I\ni t\longmapsto \nu_t \in {\mathcal S}(M)
\end{equation}
passing through a state $\nu=\nu_0$ and evolving in the state space ${\mathcal{S}}(M)$ of $M$
and admitting a $\gamma$-compliant unital $^*$-representation $\{\pi,{\mathcal H}_\pi\}$ around $\nu$. It is important to note that by the postulated existence of a curve \eqref{3tang0a} of the above type some restriction on the choice of the state $\nu$ relative to ${\mathcal{S}}(M)$ will exist. Especially, if each state $\nu$ can be an option this suggests ${\mathcal{S}}(M)$ to be a (sub)manifold of states, see Definition \ref{submann}/Theorem \ref{mannig}, and requiring $M$ to be of dimension three at least, e.g.~for a non-commutative $M$ the setting of  \eqref{3tang0a} unambiguously gets possible at any $\nu$. The generic case of \eqref{3tang0a} will be that $\nu$ can be chosen freely from the states of some (proper) submanifold $\Omega\subset {\mathcal{S}}(M) $ of states, and thus $\nu\in \gamma\subset \Omega$ is fulfilled.

Along with \eqref{3tang0a}, by assumption on $\gamma$  there is an open interval  $I_\pi$, with $0\in I_\pi\subset I$, and a differentiable at $t=0$ map into the $\pi$-fibres of states of $\gamma$ around $\nu$,
\begin{equation}\label{3tang0b}
 I_\pi\ni t\mapsto \varphi_t\in {\mathcal S}_{\pi,M}(\nu_t)
\end{equation}
 such that $\varphi_t=\varphi+t\,\psi+{\mathbf o}(t)$, with $\varphi=\varphi_0$, and with  the tangent vector $\psi=\varphi_t^{\,\prime}|_{t=0}$ \begin{equation}\label{3tang0c}
 \psi=\psi(\pi,{\mathcal H}_\pi,(\varphi_t))\in{\mathsf{T}}_\nu(M|\gamma)
 \end{equation} of the implementation $(\varphi_t)$ of $\gamma$ around $\nu$,
 with tangent form $f\in{\mathsf{T}}_\nu(M)$ satisfying
\begin{equation}\label{3tang}
   f(\cdot) =\frac{d}{d\/t}\,\nu_t\bigg|_{t=0}= \langle \pi(\cdot) \psi,\varphi\rangle_\pi+\langle \pi(\cdot)\varphi, \psi\rangle_\pi
   \end{equation}
on $M$. In case of $\gamma\subset \Omega$, with submanifold $\Omega$ of states, the tangent form $f$ in \eqref{3tang} is bound to range through the local tangent space ${\mathsf{T}}\Omega_\nu$ relative to $\Omega$,  ${\mathsf{T}}\Omega_\nu\subset {\mathsf{T}}_\nu (M)$, see Definition \ref{loctan}.  Obviously,  $f$ can be obtained as
$f(\cdot)=f_\pi(\pi(\cdot))$ with the help of the ultrastrongly continuous hermitian linear form $f_\pi$ over $\pi(M)^{\,\prime\prime}$ given by
 \begin{equation}\label{3tang1}
   f_\pi = \re f_{2\psi,\varphi}
   \end{equation}
\end{subequations}
Subsequently, the above common premises and meanings in context of $M$, $\gamma$, $\pi$, ${\mathcal H}_\pi$, $\varphi_t$, $\psi$ and $f$, $f_\pi$ will be supposed to be satisfied.

\subsection{Main results}\label{conappl1}
\subsubsection{An invariant reading in terms of the tangent form}\label{conappl2}
In consequence of Theorem \ref{Cstern}, Corollary \ref{tanfolg} and Remark \ref{bestapprox}, $\|\hat{\psi}_0\|_\pi$ is an invariant of $\gamma$. In fact,  neither does $\|\hat{\psi}_0\|_\pi$ depend on the special local implementation $(\varphi_t)$ used, nor does it actually relate to the details of the $\gamma$-compliant $^*$-representation $\{\pi,{\mathcal H}_\pi\}$ used.
\begin{lemma}\label{explbound}
Let $\gamma$ be passing through $\nu$, with tangent form  $f\in{\mathsf{T}}_\nu(M)$. Suppose  $\psi=\psi(\pi,{\mathcal H}_\pi,(\varphi_t))\in{\mathsf{T}}_\nu(M|\gamma)$, with $\gamma$-compliant $^*$-representation $\{\pi,{\mathcal H}_\pi\}$ around $\nu$. Let the best $\langle\cdot,\cdot\rangle_{\pi,\mathbb R}$-approximation of $\psi\in ({\mathcal H}_\pi)_{\mathbb R}$ within $[\pi(M)_h\varphi]$ be given by   $$\hat{\psi}_0=\hat{\psi}_0(\pi,{\mathcal H}_\pi,(\varphi_t))$$
Then, with $\|f\|_\nu$ defined in accordance with  \textup{Definition \ref{e.1}}, the following is fulfilled:
\begin{equation}\label{explbound1}
  \|\hat{\psi}_0\|_\pi =\|f\|_\nu
\end{equation}
Beyond that, in order to calculate $\|f\|_\nu$,  in formula \textup{\eqref{tangentnorm}} it is sufficient to take into account only those decompositions $\{x\}$ consisting of mutually commuting elements.
\end{lemma}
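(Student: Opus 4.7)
The plan is to assemble formula \eqref{explbound1} as a direct corollary of Theorem \ref{Cstern} together with the identification of $\hat{\psi}_0$ recorded in Remark \ref{bestapprox}, and then to obtain the commuting-decomposition refinement by reading off what is already established inside the proof of Lemma \ref{dichte} and recorded in Remark \ref{comm1}.

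First I would notice that, by the hypotheses on $\gamma$ and the implementation $(\varphi_t)$ together with \eqref{3tang}, the tangent form is already presented in the shape required by Theorem \ref{Cstern}\,\eqref{Cstern1a}, namely $f(\cdot)=\langle\pi(\cdot)\psi,\varphi\rangle_\pi+\langle\pi(\cdot)\varphi,\psi\rangle_\pi$ with $\Re\langle\psi,\varphi\rangle_\pi=0$. Therefore the implication Theorem \ref{Cstern}\,\eqref{Cstern1a}$\Rightarrow$\eqref{Cstern2} yields a uniquely determined $\hat{\xi}_0\in[\pi(M)_{\mathsf h}\varphi]$ representing $f$ in the same way and satisfying $\|f\|_\nu=\|\hat{\xi}_0\|_\pi$. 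Now Remark \ref{bestapprox}\,\eqref{bestapprox1}--\eqref{bestapprox2} identifies this $\hat{\xi}_0$ with the orthogonal projection of $\psi$ onto $[\pi(M)_{\mathsf h}\varphi]$ in the real Hilbert structure $\langle\cdot,\cdot\rangle_{\pi,\mathbb R}$, that is, exactly with the best approximation $\hat{\psi}_0$ prescribed in the statement. Hence $\hat{\xi}_0=\hat{\psi}_0$ and \eqref{explbound1} follows.

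For the second assertion I would pass to the enveloping $vN$-algebra $N=\pi(M)^{\,\prime\prime}$, lifting $\nu$ to the vector state $\nu_\pi$ implemented by $\varphi$ and $f$ to the ultrastrongly continuous hermitian form $f_\pi=\re f_{2\psi,\varphi}$ of \eqref{3tang1}, so that $f=f_\pi\circ\pi$. Applying Corollary \ref{haupt} with ${\mathfrak A}=\pi(M)$ gives $\|f\|_\nu=\|f_\pi|\pi(M)\|_{\nu_\pi|\pi(M)}=\|f_\pi\|_{\nu_\pi}$. The proof of Lemma \ref{dichte} actually constructs, from any spectral approximation of a hermitian element of $N$, explicit finite positive decompositions $X^{(n)}\subset\pi(M)_+$ whose members are produced by spectral calculus from a single positive operator and thus are mutually commuting, and which already saturate the supremum in the definition of $\|f_\pi|\pi(M)\|_{\nu_\pi|\pi(M)}$. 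This is precisely what is summarized in Remark \ref{comm1}. Pulling back through the $^*$-homomorphism $\pi$ turns these commuting decompositions in $\pi(M)$ into commuting finite positive decompositions in $M$, which therefore already saturate the supremum in \eqref{tangentnorm} for $f$ and $\nu$, giving the commuting-decomposition refinement.

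The only mildly delicate point is the identification of the abstract uniqueness object $\hat{\xi}_0$ furnished by Theorem \ref{Cstern} with the concrete Euclidean best approximation $\hat{\psi}_0$ prescribed in the statement; once Remark \ref{bestapprox} is invoked this is a one-line observation, and no genuine obstacle is expected — the lemma is essentially a repackaging of material already established.
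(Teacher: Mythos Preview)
Your proposal is correct and follows essentially the same route as the paper: invoke Theorem~\ref{Cstern}\,\eqref{Cstern2} together with Remark~\ref{bestapprox} to identify the best-approximation vector and obtain \eqref{explbound1}, then combine \eqref{idi} (equivalently Corollary~\ref{haupt}) with Remark~\ref{comm1} for the commuting-decomposition refinement. The only cosmetic point is that ``pulling back through $\pi$'' is slightly loose when $\pi$ is not faithful; what actually happens is that the decompositions $X^{(n)}$ in Remark~\ref{comm1} are built via continuous functional calculus from a single hermitian $\pi(z)\in\pi(M)_{\mathsf h}$, so the same calculus applied to $z\in M_{\mathsf h}$ produces commuting decompositions in $M$ with the same values under $f$ and $\nu$.
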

\begin{proof}
The best $\langle\cdot,\cdot\rangle_{\pi,\mathbb R}$-approximation of $\psi\in ({\mathcal H}_\pi)_{\mathbb R}$ in $[\pi(M)_h\varphi]$ is the vector $\hat{\psi}_0$ figuring in context of Theorem \ref{Cstern}\,\eqref{Cstern2}, see  Remark \ref{bestapprox}. Along the latter
the validity of \eqref{explbound1} follows. That in formula  \eqref{tangentnorm} to positive decompositions of mutually commuting elements of $M$ can be restricted is a consequence of Remark \ref{comm1} and since \eqref{idi} is fulfilled, with ${\mathfrak A}=\pi(M)$ and $f_\pi$ according to \eqref{3tang1}, and in considering the relation   $$[\pi(M)_h\varphi]=[{\pi(M)^{\,\prime\prime}}_h\varphi]$$ which follows by applying a Kaplansky density argument.
\end{proof}
\subsubsection{Length element and fundamental form in Bures geometry}\label{buresleng}
In view of \eqref{3b.3}, Lemma \ref{rinow0} and Lemma \ref{explbound}, under the common premises of this section  Theorem \ref{line0} now can be given the following form:
\begin{theorem}\label{fundform}
Let $\gamma$ be passing through $\nu$, with tangent form $f\in{\mathsf{T}}_\nu(M)$, and tangent vector    $\psi=\psi(\pi,{\mathcal H}_\pi,(\varphi_t))\in{\mathsf{T}}_\nu(M|\gamma)$ referring to the $\gamma$-compliant $^*$-representation $\{\pi,{\mathcal H}_\pi\}$ around $\nu$. The following then holds:
\begin{equation}\label{fundform1}
 \biggl(\limsup_{t\to 0}
\frac{d_B(M|\nu_{t},\nu)}{|t|}\biggr)^2=\|f\|_\nu^2+\|p_\pi(\varphi)^\perp p_\pi^{\,\prime}(\varphi)^\perp\psi\|_\pi^2
\end{equation}
\end{theorem}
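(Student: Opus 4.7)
The plan is to observe that Theorem \ref{fundform} is essentially a repackaging of Theorem \ref{line0} combined with Lemma \ref{explbound}, with the left-hand side rewritten using the definition of the local dilation function at $t=0$. Concretely, I would proceed as follows.

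First, I would invoke formula \eqref{3b.3}, which identifies the local dilation function of the curve $\gamma$ at $t=0$ with the $\limsup$ expression appearing on the left-hand side of \eqref{fundform1}:
\[
{\mathop{\mathrm{dil}}}_0^B \gamma = \limsup_{t\to 0}\frac{d_B(M|\nu_t,\nu)}{|t|}.
\]
This reduces the problem to computing ${\mathop{\mathrm{dil}}}_0^B\gamma$ in terms of the tangent form $f$ and the tangent vector $\psi$.

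Next, I would apply Theorem \ref{line0} to the differentiable local implementation $t\mapsto \varphi_t$ associated with the $\gamma$-compliant representation $\{\pi,{\mathcal H}_\pi\}$. That theorem yields
\[
{\mathop{\mathrm{dil}}}_0^B \gamma = \sqrt{\|\hat{\psi}_0\|_\pi^2 + \|p_\pi(\varphi)^\perp p_\pi^{\,\prime}(\varphi)^\perp\psi\|_\pi^2},
\]
where $\hat{\psi}_0$ is the best $\langle\cdot,\cdot\rangle_{\pi,{\mathbb R}}$-approximation of $\psi$ within $[\pi(M)_{\mathsf h}\varphi]$. The orthogonal-complement term on the right already matches the second summand on the right-hand side of \eqref{fundform1}, so it remains only to convert $\|\hat{\psi}_0\|_\pi$ into the intrinsic quantity $\|f\|_\nu$.

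This last step is precisely the content of Lemma \ref{explbound}, formula \eqref{explbound1}, which asserts $\|\hat{\psi}_0\|_\pi = \|f\|_\nu$. Substituting this identity into the displayed expression for ${\mathop{\mathrm{dil}}}_0^B\gamma$ and squaring gives the asserted Pythagorean identity \eqref{fundform1}. There is no real obstacle here since all the heavy lifting was done in Theorem \ref{line0} (the construction of the auxiliary curves $\tilde{\gamma}^{(n)}$ and the passage through geodesic arcs of $\Gamma(M|\nu,\varrho^{(n)})$) and in the duality between Euclidean approximation in $[\pi(M)_{\mathsf h}\varphi]$ and the tangent norm $\|\cdot\|_\nu$ established via Lemma \ref{euk1}, Lemma \ref{normformel} and Corollary \ref{haupt}; the present theorem is merely the invariant form of that result, freed from the choice of $\gamma$-compliant representation and local implementation.
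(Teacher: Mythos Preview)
Your proposal is correct and matches the paper's approach exactly: the paper introduces Theorem \ref{fundform} with the sentence ``In view of \eqref{3b.3}, Lemma \ref{rinow0} and Lemma \ref{explbound}, under the common premises of this section Theorem \ref{line0} now can be given the following form,'' which is precisely the chain of identifications you spell out.
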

\noindent The formula \eqref{fundform1} will be referred to as `Pythagorean law'. The latter can be given a formulation reading in terms of the  Bures length element at $\nu$ along $\gamma$. In fact, since
$\nu_t(\cdot)=\langle\pi(\cdot)\varphi_t,\varphi_t\rangle_\pi$ holds
with  respect to the $\gamma$-compliant $^*$-representation $\{\pi,{\mathcal H}_\pi\}$ and associated implementation $(\varphi_t)$ of $\gamma$, infinitesimally the auxiliary condition
\begin{subequations}\label{fundform20}
\begin{equation}\label{fundform2a}
{\mathsf{d}}\nu(\cdot)=\langle\pi(\cdot){\mathsf{d}}\varphi,\varphi\rangle_\pi+\langle\pi(\cdot)\varphi,{\mathsf{d}}\varphi\rangle_\pi
\end{equation}
between the differentials ${\mathsf{d}}\nu$ at $\nu$  along $\gamma$ and ${\mathsf{d}}\varphi$ at $\varphi$ along $(\varphi_t)$, respectively, is satisfied. By Lemma \ref{rinow0} and due to the Pythagorean law the square of the length element ${\mathsf{d}}s={\mathsf{d}}s[\gamma,\nu]$ along the curve $\gamma$ at $\nu$ in a covariant manner can be given as
\begin{equation}\label{fundform2}
{\mathsf{d}}s^2=\|{\mathsf{d}}\nu\|_\nu^2+\|p_\pi(\varphi)^\perp p_\pi^{\,\prime}(\varphi)^\perp {\mathsf{d}}\varphi\|_\pi^2
\end{equation}
\end{subequations}
In line with this, note that there is given another invariant
 \begin{equation}\label{inv2}
 I_\nu(M|\gamma)=\sqrt{\bigl({\mathop{\mathrm{dil}}}_t^B \gamma|_{t=0}\bigr)^2-\|f\|_\nu^2}=\|p_\pi(\varphi)^\perp p_\pi^{\,\prime}(\varphi)^\perp\psi\|_\pi
\end{equation}
which is intrinsic of the curve $\gamma$ and state $\nu$ it is passing through, but which now reads in terms of the tangent vector $\psi\in{\mathsf{T}}_\nu(M|\gamma)$ of an implementation of $\gamma$. Obviously, it proves independent of the special choice of the unital $^*$-representation $\{\pi,{\mathcal H}_\pi\}$ and the implementation $(\varphi_t)$ of the given parameterized curve $\gamma$ at $\nu$ as far as $\varphi=\varphi_0$ and $\psi=\psi(\pi,{\mathcal H}_\pi,(\varphi_t))\in{\mathsf{T}}_\nu(M|\gamma)$ are fulfilled.

For fixed $\nu$ and given $f\in{\mathsf{T}}_\nu(M)$, let us consider the equivalence class of admitted parameterized curves $\gamma$ passing through $\nu$ with tangent form $f$, that is $\gamma^{\,\prime}(0)=f$ be fulfilled for the curves of this class. The range of values taken by the invariant $I_\nu(M|\gamma)$ while letting $\gamma$ extend through the curves of this equivalence class be
\[
{\mathcal R}(\nu,f)=\bigl\{I_\nu(M|\gamma): \gamma^{\,\prime}(0)=f\bigr\}
\]
Note that unless $\nu$ is a character state, ${\mathsf{T}}_\nu(M)\not=\{{\mathsf 0}\}$ is fulfilled, by Corollary \ref{extan}. Therefore ${\mathcal R}(\nu,f)$  possesses the subsequently described property.
\begin{lemma}\label{irange}
For $f\in{\mathsf{T}}_\nu(M)\backslash \{{\mathsf 0}\}$, either ${\mathcal R}(\nu,f)=\{0\}$ or ${\mathcal R}(\nu,f)={\mathbb R}_+$ holds.
\end{lemma}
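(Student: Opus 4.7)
The dichotomy will be established by showing that $0\in{\mathcal R}(\nu,f)$ always, and that whenever a strictly positive value is attained, the entire half-line is covered. For the first part, Corollary \ref{tanfolg} provides, to any unital $^*$-representation $\{\pi_0,{\mathcal H}_{\pi_0}\}$ with ${\mathcal S}_{\pi_0,M}(\nu)\not=\emptyset$ and any $\varphi_0\in{\mathcal S}_{\pi_0,M}(\nu)$, the canonical curve $\gamma_0$ through $\nu$ defined by \eqref{tanfolg0b}; it has tangent form $f$ and is $\gamma_0$-compliant for $\{\pi_0,{\mathcal H}_{\pi_0}\}$, with tangent vector $\hat{\xi}_0\in[\pi_0(M)_{\mathsf h}\varphi_0]\subset[\pi_0(M)\varphi_0]=p_{\pi_0}^{\,\prime}(\varphi_0){\mathcal H}_{\pi_0}$. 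Hence $p_{\pi_0}^{\,\prime}(\varphi_0)^\perp\hat{\xi}_0=0$, and $I_\nu(M|\gamma_0)=0$ follows from \eqref{inv2}.

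For the second part, suppose some $\alpha>0$ is attained, say $\alpha=I_\nu(M|\gamma_1)$ for a curve $\gamma_1$ with tangent form $f$, $\gamma_1$-compliant representation $\{\pi,{\mathcal H}_\pi\}$, implementation $(\varphi_t)$, tangent vector $\psi\in{\mathsf{T}}_\nu(M|\gamma_1)$ and $\varphi=\varphi_0$. Set $\chi=p_\pi(\varphi)^\perp p_\pi^{\,\prime}(\varphi)^\perp\psi$; since $p_\pi(\varphi)\in\pi(M)^{\,\prime\prime}$ and $p_\pi^{\,\prime}(\varphi)\in\pi(M)^{\,\prime}$ commute, $p_\pi(\varphi)^\perp p_\pi^{\,\prime}(\varphi)^\perp$ is an orthoprojection, so $\|\chi\|_\pi=\alpha$ and $\chi$ is orthogonal both to $[\pi(M)^{\,\prime}\varphi]\ni\varphi$ and to $[\pi(M)\varphi]$. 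In particular, for every $x\in M$,
\[
\langle\pi(x)\chi,\varphi\rangle_\pi=\overline{\langle\pi(x^*)\varphi,\chi\rangle_\pi}=0,
\]
so adding a multiple of $\chi$ contributes nothing to the associated form $\re f_{(\cdot),\varphi}$ on $\pi(M)$.

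For $\mu\in{\mathbb R}$ put $\psi_\mu=\psi+\mu\chi$. Orthogonality of $\chi$ to $\varphi$ yields $\Re\langle\psi_\mu,\varphi\rangle_\pi=\Re\langle\psi,\varphi\rangle_\pi=0$, so that
\[
\varphi_t^{(\mu)}=\bigl(\varphi+t\,\psi_\mu\bigr)\big/\sqrt{1+t^2\|\psi_\mu\|_\pi^2}
\]
is a simple $C^1$-family of unit vectors on a symmetric open interval around $t=0$. It implements via $\pi$ a differentiable family $\nu_t^{(\mu)}$ of states whose derivative at $t=0$ equals
\[
\langle\pi(\cdot)\psi_\mu,\varphi\rangle_\pi+\langle\pi(\cdot)\varphi,\psi_\mu\rangle_\pi=\langle\pi(\cdot)\psi,\varphi\rangle_\pi+\langle\pi(\cdot)\varphi,\psi\rangle_\pi=f
\]
by the annihilation property of $\chi$; since $f\not={\mathsf 0}$, Lemma \ref{constr} promotes this family, restricted to a sufficiently small neighbourhood of $0$, to a parameterized curve $\gamma_\mu$ passing through $\nu$ with tangent form $f$ and with $\{\pi,{\mathcal H}_\pi\}$ as a $\gamma_\mu$-compliant $^*$-representation around $\nu$. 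Because $p_\pi(\varphi)^\perp p_\pi^{\,\prime}(\varphi)^\perp\chi=\chi$, the invariant is
\[
I_\nu(M|\gamma_\mu)=\|p_\pi(\varphi)^\perp p_\pi^{\,\prime}(\varphi)^\perp\psi_\mu\|_\pi=|1+\mu|\,\alpha,
\]
and as $\mu$ runs through ${\mathbb R}$ the right-hand side exhausts all of ${\mathbb R}_+$; hence ${\mathcal R}(\nu,f)={\mathbb R}_+$ in this case, which completes the dichotomy. The principal subtlety I expect is the verification that perturbing $\psi$ by any real multiple of $\chi$ simultaneously preserves the tangent form $f$ and scales the invariant $I_\nu$ linearly; this rests entirely on the fact that $\chi$ lies in the intersection of the two relevant orthogonal complements, so that once this orthogonality is noted the remainder of the construction is routine.
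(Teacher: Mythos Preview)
Your proof is correct and follows essentially the same approach as the paper's: perturb the tangent vector by real multiples of $\chi=p_\pi(\varphi)^\perp p_\pi^{\,\prime}(\varphi)^\perp\psi$ to realize any prescribed value of the invariant. The paper uses the perturbed vector $\hat{\psi}_0+\frac{\alpha}{\|\chi\|_\pi}\chi$ directly, whereas you perturb the full tangent vector $\psi$ by $\mu\chi$ and separately invoke Corollary~\ref{tanfolg} to cover the value $0$; both variants rest on the same orthogonality observation and the same curve construction via Lemma~\ref{constr}.
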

\begin{proof}
 Suppose ${\mathcal R}(\nu,f)\not=\{0\}$ for some   $f\in{\mathsf{T}}_\nu(M)\backslash \{{\mathsf 0}\}$, and be $\gamma$ a parameterized curve passing through $\nu$ with tangent form $\gamma^{\,\prime}(0)=f$ there and tangent vector $\psi=\psi(\pi,{\mathcal H}_\pi,(\varphi_t))\in{\mathsf{T}}_\nu(M|\gamma)$ with respect to an implementation $(\varphi_t)$ of $\gamma$ around $\nu$ and $\gamma$-compliant representation $\{\pi,{\mathcal H}_\pi\}$ such that
\[
I_\nu(M|\gamma)=\|p_\pi(\varphi)^\perp p_\pi^{\,\prime}(\varphi)^\perp\psi\|_\pi\not=0
\]
For given $\alpha\in {\mathbb R}_+$, let us define the vector $\phi_\alpha$ by
\[
\phi_\alpha=\hat{\psi}_0+\frac{\alpha}{\|p_\pi(\varphi)^\perp p_\pi^{\,\prime}(\varphi)^\perp\psi\|_\pi}\,p_\pi(\varphi)^\perp p_\pi^{\,\prime}(\varphi)^\perp\psi
\]
with the best $\langle\cdot,\cdot\rangle_{\pi,{\mathbb R}}$-approximation  $\hat{\psi}_0$ of $\psi$ within $[\pi(M)\varphi]$. Note that then $$\langle\phi_\alpha,\varphi\rangle_{\pi,{\mathbb R}}=0$$ with $\phi_\alpha\not={\mathsf 0}$. Accordingly, $\phi_\alpha$ cannot be a multiple of $\varphi$. Also, since by assumption $f\not={\mathsf 0}$ is fulfilled, Lemma \ref{constr} can be applied, and in the case at hand  implies that, for some interval $I\subset I_\pi$ around $t=0$, the map
\[
I\ni t\longmapsto \eta_t=\frac{\varphi+t\phi_\alpha}{\sqrt{1+t^2\|\phi_\alpha\|_\pi^2}} \in {\mathcal S}_{\pi,M}(\omega_t)
\]
is the implementation map of some parameterized curve $\gamma_\alpha: I\ni t\longmapsto \omega_t\in {\mathcal S}(M)$ of the admissible class
in the state space and passing through $\nu$ at $t=0$ with $\gamma_\alpha$-compliant $^*$-representation $\{\pi,{\mathcal H}_\pi\}$ around $\nu$. It is plain to see that $$\phi_\alpha=\phi_\alpha(\pi,{\mathcal H}_\pi,(\eta_t))\in {\mathsf{T}}_\nu(M|\gamma_\alpha)$$ is the tangent vector of the above mentioned implementation of $\gamma_\alpha$. Clearly, by construction then  $\hat{(\phi_\alpha)}_0=\hat{\psi}_0$, too.  Therefore, by assumption and Lemma \ref{explbound}, $$\|\hat{(\phi_\alpha)}_0\|_\pi=\|f\|_\nu$$ Also, we see that
\[
p_\pi(\varphi)^\perp p_\pi^{\,\prime}(\varphi)^\perp\phi_\alpha=\frac{\alpha}{\|p_\pi(\varphi)^\perp p_\pi^{\,\prime}(\varphi)^\perp\psi\|_\pi}\,p_\pi(\varphi)^\perp p_\pi^{\,\prime}(\varphi)^\perp\psi
\]
has to be fulfilled.  Hence, in view of \eqref{inv2} we are arriving at
\[
I_\nu(M|\gamma_\alpha)=\|p_\pi(\varphi)^\perp p_\pi^{\,\prime}(\varphi)^\perp\phi_\alpha\|_\pi=\alpha
\]
That is, $\alpha\in {\mathcal R}(\nu,f)$. Since $\alpha\in {\mathbb R}_+$ was arbitrarily chosen, the assertion follows.
\end{proof}
\subsubsection{The Finslerian law}\label{finsler}
Suppose the invariant \eqref{inv2} is vanishing, $I_\nu(M|\gamma)=0$, for a state $\nu$ and a curve $\gamma$ passing through this state. Refer to this as `Finslerian state $\nu$ of $\gamma$' or as `$\nu$ is a Finslerian state of $\gamma$'.  By Lemma \ref{irange}, if $\nu$ is not a character state, then for each $f\in{\mathsf{T}}_\nu(M)\backslash \{{\mathsf 0}\}$ there is a curve $\gamma$ passing through $\nu$ with tangent form $f$ and such that $\nu$ is a Finslerian state of $\gamma$. More precisely, relating this the following holds.
\begin{theorem}\label{fundform3}
The state $\nu$ is a Finslerian state of $\gamma$ if, and only if
\begin{subequations}\label{fundform5}
\begin{equation}\label{fundform4}
 \lim_{t\to 0}
\frac{d_B(M|\nu_{t},\nu)}{|t|}=\|f\|_\nu
\end{equation}
is fulfilled. The latter is equivalent to the condition that for some \textup{(}and thus for any\textup{)} tangent vector $\psi=\psi(\pi,{\mathcal H}_\pi,(\varphi_t))\in{\mathsf{T}}_\nu(M|\gamma)$ a decomposition
\begin{equation}\label{fundform6}
\psi=\theta+\theta^\prime
\end{equation}
\end{subequations}
exists, with $\theta\in [\pi(M)\,\varphi]$ and $\theta^\prime\in [\pi(M)^{\,\prime}\varphi]$.
\end{theorem}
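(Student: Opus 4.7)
The plan is to prove the theorem in three coordinated steps, exploiting the earlier structural results rather than doing anything computationally heavy.

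First I would establish the equivalence between $\nu$ being a Finslerian state of $\gamma$ and the limit relation \eqref{fundform4}. The key ingredients are already available: the lower estimate of Lemma \ref{lowerbound} gives $\liminf_{t\to 0} d_B(M|\nu_t,\nu)/|t|\geq \|f\|_\nu$, while the Pythagorean law of Theorem \ref{fundform} provides
\[
\biggl(\limsup_{t\to 0}\frac{d_B(M|\nu_t,\nu)}{|t|}\biggr)^2=\|f\|_\nu^2+I_\nu(M|\gamma)^2
\]
with the invariant $I_\nu(M|\gamma)$ from \eqref{inv2}. If $I_\nu(M|\gamma)=0$, then $\limsup=\|f\|_\nu$, which squeezed against the $\liminf$-estimate forces the full limit to exist and to equal $\|f\|_\nu$. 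Conversely, if \eqref{fundform4} holds, the $\limsup$ equals $\|f\|_\nu$ and the Pythagorean law immediately forces $I_\nu(M|\gamma)=0$.

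Next I would reduce the geometric decomposition condition \eqref{fundform6} to the vanishing of the invariant, working inside the Hilbert space ${\mathcal H}_\pi$. The decisive observation is that $p_\pi(\varphi)\in\pi(M)^{\,\prime\prime}$ and $p_\pi^{\,\prime}(\varphi)\in\pi(M)^{\,\prime}$ lie in mutually commuting von Neumann algebras, so their orthoprojections and complements commute. Assuming the decomposition $\psi=\theta+\theta^{\,\prime}$ with $\theta\in[\pi(M)\varphi]=p_\pi^{\,\prime}(\varphi){\mathcal H}_\pi$ and $\theta^{\,\prime}\in[\pi(M)^{\,\prime}\varphi]=p_\pi(\varphi){\mathcal H}_\pi$, one checks
\[
p_\pi(\varphi)^\perp p_\pi^{\,\prime}(\varphi)^\perp\psi = p_\pi(\varphi)^\perp\bigl(p_\pi^{\,\prime}(\varphi)^\perp\theta\bigr)+p_\pi^{\,\prime}(\varphi)^\perp\bigl(p_\pi(\varphi)^\perp\theta^{\,\prime}\bigr)=0
\]
using commutativity and the two identities $p_\pi^{\,\prime}(\varphi)^\perp\theta=0$, $p_\pi(\varphi)^\perp\theta^{\,\prime}=0$. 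Conversely, writing the orthogonal splitting $\psi=p_\pi^{\,\prime}(\varphi)\psi+p_\pi^{\,\prime}(\varphi)^\perp\psi$ and noting that the vanishing $p_\pi(\varphi)^\perp p_\pi^{\,\prime}(\varphi)^\perp\psi=0$ means $p_\pi^{\,\prime}(\varphi)^\perp\psi$ already lies in the range of $p_\pi(\varphi)$, one obtains $\theta:=p_\pi^{\,\prime}(\varphi)\psi\in[\pi(M)\varphi]$ and $\theta^{\,\prime}:=p_\pi^{\,\prime}(\varphi)^\perp\psi\in[\pi(M)^{\,\prime}\varphi]$ with $\psi=\theta+\theta^{\,\prime}$.

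Finally I would justify the parenthetical remark that the decomposition condition is independent of the choice of tangent vector $\psi\in{\mathsf{T}}_\nu(M|\gamma)$. This is immediate once one recalls that, by \eqref{inv2}, the quantity $\|p_\pi(\varphi)^\perp p_\pi^{\,\prime}(\varphi)^\perp\psi\|_\pi$ equals $I_\nu(M|\gamma)$, which the Pythagorean law exhibits as an intrinsic invariant of the pair $(\gamma,\nu)$ not depending on the $\gamma$-compliant representation $\{\pi,{\mathcal H}_\pi\}$ or on the particular local implementation $(\varphi_t)$; so the vanishing of $p_\pi(\varphi)^\perp p_\pi^{\,\prime}(\varphi)^\perp\psi$ in one representation forces vanishing in every other. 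There is no real obstacle here: all the work is carried by Theorem \ref{fundform}, Lemma \ref{lowerbound} and the commutation $p_\pi(\varphi)p_\pi^{\,\prime}(\varphi)=p_\pi^{\,\prime}(\varphi)p_\pi(\varphi)$; the mildest care is needed only to handle the orthogonal splitting cleanly and to argue the invariance of the decomposition condition across the ambiguity encoded in ${\mathsf{T}}_\nu(M|\gamma)$.
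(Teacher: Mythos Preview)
Your proposal is correct and follows essentially the same approach as the paper's own proof: both argue the first equivalence by sandwiching the $\limsup$ from the Pythagorean law \eqref{fundform1} against the $\liminf$-estimate of Lemma \ref{lowerbound}, and both handle the decomposition condition by noting that $p_\pi^{\,\prime}(\varphi)^\perp\theta=0$, $p_\pi(\varphi)^\perp\theta'=0$ kill the relevant projection in one direction, while in the converse direction $\theta=p_\pi^{\,\prime}(\varphi)\psi$ and $\theta'=p_\pi^{\,\prime}(\varphi)^\perp\psi$ furnish the splitting. Your explicit treatment of the ``some (and thus for any)'' clause via the invariance of $I_\nu(M|\gamma)$ is a small expository addition that the paper leaves implicit.
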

\begin{proof}
In line with the above, suppose $I_\nu(M|\gamma)=0$. Then, by \eqref{inv2} one infers that $p_\pi(\varphi)^\perp p_\pi^{\,\prime}(\varphi)^\perp\psi=0$. From \eqref{fundform1} then
\begin{equation*}
   {\mathop{\mathrm{dil}}}_0^B \gamma=\limsup_{t\to 0}
\frac{d_B(M|\nu_{t},\nu)}{|t|}=\|f\|_\nu
\end{equation*}
follows. On the other hand, by Lemma \ref{lowerbound}, \eqref{lbound}, in any case one knows that
\begin{equation*}
{\mathop{\mathrm{dil}}}_0^B \gamma\geq \liminf_{t\to 0}
\frac{d_B(M|\nu_{t},\nu)}{|t|}  \geq   \|f\|_\nu
\end{equation*}
Hence, the limit exists and obeys $${\mathop{\mathrm{dil}}}_0^B \gamma=\lim_{t\to 0}
\frac{d_B(M|\nu_{t},\nu)}{|t|}=\|f\,\|_\nu$$ which is \eqref{fundform4}. In view of  the latter by
\eqref{fundform1} obviously  $p_\pi(\varphi)^\perp p_\pi^{\,\prime}(\varphi)^\perp\psi=0$ is implied. Thus, requiring $\nu$ to be a Finslerian state of $\gamma$ is equivalent to \eqref{fundform4}. Also note that $\theta\in [\pi(M)\,\varphi]$ and $\theta^\prime\in [\pi(M)^{\,\prime}\varphi]$ imply  $p_\pi^{\,\prime}(\varphi)^\perp\theta=0$ and $p_\pi(\varphi)^\perp\theta^\prime=0$ to hold. Hence, $p_\pi(\varphi)^\perp p_\pi^{\,\prime}(\varphi)^\perp\theta=0$ as well as $p_\pi(\varphi)^\perp p_\pi^{\,\prime}(\varphi)^\perp\theta^\prime=0$ have to hold. Thus, if a  decomposition \eqref{fundform6} exists, then $p_\pi(\varphi)^\perp p_\pi^{\,\prime}(\varphi)^\perp\psi=0$ follows. On the other hand,  $p_\pi(\varphi)^\perp p_\pi^{\,\prime}(\varphi)^\perp\psi=0$ implies  $p_\pi(\varphi) p_\pi^{\,\prime}(\varphi)^\perp\psi=p_\pi^{\,\prime}(\varphi)^\perp\psi$. Hence, we have $p_\pi^{\,\prime}(\varphi)^\perp\psi\in [\pi(M)^{\,\prime}\varphi]$. From   $p_\pi^{\,\prime}(\varphi)\psi\in [\pi(M)^{\,\prime\prime}\varphi]$ and $[\pi(M)^{\,\prime\prime}\varphi]= [\pi(M)\,\varphi]$ it then follows that
$\theta=p_\pi^{\,\prime}(\varphi)\psi$ and $\theta^{\,\prime}=p_\pi^{\,\prime}(\varphi)^\perp\psi$ may be chosen in   \eqref{fundform6}.
\end{proof}
The relation \eqref{fundform4} will be referred to as the `Finslerian law', henceforth. Thereby, sufficient conditions under which the  condition \eqref{fundform6}, which according to Theorem \ref{fundform3} is equivalent to the Finslerian law, is satisfied and which are reading in terms of a local implementation can be easily read off. The simplest case of such a condition does not make reference to the very details of $(\varphi_t)$ even, and obviously guarantees that the condition \eqref{fundform6} is fulfilled for $\psi=\psi(\pi,{\mathcal H}_\pi,(\varphi_t))$ by triviallity.
\begin{lemma}\label{trivi}
Let $\varphi\in {\mathcal S}_{\pi,M}(\nu)$, for some $\gamma$-compliant representation $\{\pi,{\mathcal H}_\pi\}$ and associated implementation $(\varphi_t)$ of $\gamma$ around $\nu$, with $\varphi=\varphi_0$. Then, the condition $${\mathcal H}_\pi=[\pi(M)\varphi]+[\pi(M)^{\,\prime}\varphi]$$ implies $\nu$ to be a Finslerian state of $\gamma$.
\end{lemma}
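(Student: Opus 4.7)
The plan is to invoke Theorem \ref{fundform3} directly. By that theorem, it suffices to exhibit, for some tangent vector $\psi=\psi(\pi,{\mathcal H}_\pi,(\varphi_t))\in{\mathsf{T}}_\nu(M|\gamma)$ associated with the given local implementation around $\nu$, a decomposition $\psi=\theta+\theta^{\,\prime}$ with $\theta\in[\pi(M)\varphi]$ and $\theta^{\,\prime}\in[\pi(M)^{\,\prime}\varphi]$.

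The key observation is that the tangent vector $\psi=\varphi_t^{\,\prime}|_{t=0}$ arising from $(\varphi_t)$ is an element of ${\mathcal H}_\pi$. Under the hypothesis ${\mathcal H}_\pi=[\pi(M)\varphi]+[\pi(M)^{\,\prime}\varphi]$, any vector of ${\mathcal H}_\pi$ by the very definition of the sum of two closed subspaces can be written as a sum of a vector from $[\pi(M)\varphi]$ and a vector from $[\pi(M)^{\,\prime}\varphi]$. Applying this to $\psi$ itself provides the required decomposition.

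Hence the equivalent Finslerian condition \eqref{fundform6} of Theorem \ref{fundform3} is satisfied for $\psi=\psi(\pi,{\mathcal H}_\pi,(\varphi_t))$, and by that theorem this entails the Finslerian law \eqref{fundform4} at $\nu$ along $\gamma$, which by definition is the statement that $\nu$ is a Finslerian state of $\gamma$. There is no real obstacle here; the lemma is a direct unpacking of the decomposition criterion established in Theorem \ref{fundform3}, which is why it is phrased as providing a sufficient condition that does not require any further details of $(\varphi_t)$.
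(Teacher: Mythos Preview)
Your proof is correct and matches the paper's approach exactly: the paper does not give a formal proof of this lemma, but introduces it by remarking that the hypothesis ``obviously guarantees that the condition \eqref{fundform6} is fulfilled for $\psi=\psi(\pi,{\mathcal H}_\pi,(\varphi_t))$ by triviality,'' which is precisely the argument you spell out.
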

The following condition takes more details of the implementation into account.
\begin{lemma}\label{fundform7}
    Let $(\varphi_t)$ be a differentiable local implementation of $\gamma$ around $\nu=\nu_0$ and living on the unital $^*$-representation $\{\pi,{\mathcal H}_\pi\}$. Suppose $p_\pi(\varphi_t)\leq p_\pi(\varphi)$ or
    $p_\pi^{\,\prime}(\varphi_t)\leq p_\pi^{\,\prime}(\varphi)$ for all $t$ of a subset ${\mathcal U}\subset I_\pi$ with limit point zero.
    Then the Finslerian law \eqref{fundform4} is satisfied.
\end{lemma}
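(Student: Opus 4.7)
The plan is to reduce the conclusion directly to the criterion \eqref{fundform6} of Theorem \ref{fundform3}: it suffices to show that the tangent vector $\psi=\varphi_t^{\,\prime}|_{t=0}$ admits a decomposition $\psi=\theta+\theta^{\,\prime}$ with $\theta\in[\pi(M)\varphi]$ and $\theta^{\,\prime}\in[\pi(M)^{\,\prime}\varphi]$, where $\varphi=\varphi_0$. In fact, under either of the two assumed local inclusions, I will show that $\psi$ lies \emph{entirely} in one of these two subspaces, so that the splitting is trivial.

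First I would pick a sequence $\{t_n\}\subset {\mathcal U}\setminus\{0\}$ with $t_n\to 0$; this exists because $0$ is a limit point of ${\mathcal U}$. Suppose first that $p_\pi(\varphi_{t_n})\leq p_\pi(\varphi)$ for every $n$. Since by definition $p_\pi(\varphi)$ projects onto $[\pi(M)^{\,\prime}\varphi]$ and since $p_\pi(\varphi_{t_n})\varphi_{t_n}=\varphi_{t_n}$, this inclusion yields $\varphi_{t_n}=p_\pi(\varphi)\varphi_{t_n}\in [\pi(M)^{\,\prime}\varphi]$. As $\varphi\in[\pi(M)^{\,\prime}\varphi]$ as well, the difference quotients satisfy
\[
\frac{\varphi_{t_n}-\varphi}{t_n}\in[\pi(M)^{\,\prime}\varphi]
\]
for every $n$. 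The subspace $[\pi(M)^{\,\prime}\varphi]$ being norm closed in ${\mathcal H}_\pi$, passing to the limit gives $\psi\in[\pi(M)^{\,\prime}\varphi]$. Hence one may take $\theta=0$ and $\theta^{\,\prime}=\psi$ in \eqref{fundform6}.

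The case $p_\pi^{\,\prime}(\varphi_{t_n})\leq p_\pi^{\,\prime}(\varphi)$ is strictly analogous: now $\varphi_{t_n}\in p_\pi^{\,\prime}(\varphi){\mathcal H}_\pi=[\pi(M)\varphi]$, so $(\varphi_{t_n}-\varphi)/t_n\in[\pi(M)\varphi]$, and closedness of $[\pi(M)\varphi]$ forces $\psi\in[\pi(M)\varphi]$. Then \eqref{fundform6} holds with $\theta=\psi$ and $\theta^{\,\prime}=0$. In either case Theorem \ref{fundform3} applies and delivers the Finslerian law \eqref{fundform4}.

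There is no genuine obstacle here; the only point that requires a small check is that one may extract the sequence $\{t_n\}$ from ${\mathcal U}$ (which is guaranteed by the assumption that $0$ is a limit point of ${\mathcal U}$, hence ${\mathcal U}$ meets every punctured neighbourhood of $0$), and that the relevant linear subspaces $[\pi(M)\varphi]$ and $[\pi(M)^{\,\prime}\varphi]$ are norm closed by construction, so the passage to the limit in the difference quotient is legitimate.
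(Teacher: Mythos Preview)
Your proof is correct and follows essentially the same route as the paper's: from the support inequality you deduce $\varphi_{t_n}\in[\pi(M)^{\,\prime}\varphi]$ (resp.\ $[\pi(M)\varphi]$), pass to the limit in the difference quotient using closedness of the subspace to get $\psi$ in the same subspace, and invoke Theorem~\ref{fundform3} via the trivial case of the decomposition~\eqref{fundform6}. The paper phrases the first step as $p_\pi(\varphi)(\varphi_t-\varphi)=\varphi_t-\varphi$, but this is exactly your argument.
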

\begin{proof}
The condition $p_\pi(\varphi_t)\leq p_\pi(\varphi)$ implies $ p_\pi(\varphi)(\varphi_t-\varphi)=(\varphi_t-\varphi)$ for  all $t\in {\mathcal U}$. Since $0$ is the limit of a sequence $\{t_n\}\subset {\mathcal U}$ with $t_n\not=0$ from this $p_\pi(\varphi)\psi=\psi$ follows. This is equivalent to $\psi\in [\pi(M)^{\,\prime}\varphi]$, which is a special case of \eqref{fundform6}. Analogously, from  $p_\pi^{\,\prime}(\varphi_t)\leq p_\pi^{\,\prime}(\varphi)$ the special case $\psi\in [\pi(M)^{\,\prime\prime}\varphi]=[\pi(M)\varphi]$ of \eqref{fundform6} is implied.  Hence, by Theorem \ref{fundform3}, in either case \eqref{fundform4} follows.
\end{proof}
In some situation, criteria formulated in Lemma \ref{fundform7} even can prove necessary.
\begin{example}\label{fundform7a}
Let $\gamma\in \Gamma(M|\varrho,\nu)$ be given in accordance with Example \ref{ex2}, and consider $\{\pi,{\mathcal H}_\pi\}$ and the local implementation $(\varphi_t)$ of $\gamma$ as given by \eqref{geo0a}. Then, the Finslerian law \eqref{fundform4} is satisfied if, and only if, $p_\pi^{\,\prime}(\varphi_t)\leq p_\pi^{\,\prime}(\varphi)$, for all $t$.
\end{example}
\begin{proof}
In view of Lemma \ref{fundform7} only necessity of the condition in question remains to be shown. Suppose condition \eqref{fundform4} is satisfied. Then one has
\[
{\mathop{\mathrm{dil}}}_0^B \gamma=\|f\|_\nu
\]
According to the construction of Example \ref{ex2} by setting for $\zeta\in {\mathcal S}_{\pi,M}(\varrho)$ one has $h_{\zeta,\varphi}^\pi\geq 0$ on $\pi(M)^{\,\prime}$. From this and \eqref{geo1} one infers that, with the tangent vector
$$\psi=\psi(\pi,{\mathcal H}_\pi,(\varphi_t))\in{\mathsf{T}}_\nu(M|\gamma)$$ $h_{\psi,\varphi}^\pi$ is a hermitian linear form on $\pi(M)^{\,\prime}$. Application of
Corollary \ref{kreuz0} then yields $$p_\pi^{\,\prime}(\varphi)\psi\in [\pi(M)^{\,\prime\prime}_h\varphi]$$
Hence $\hat{\psi}_0=p_\pi^{\,\prime}(\varphi)\psi$. Remind that $\|\hat{\psi}_0\|_\pi=\|f\|_\nu$, by \eqref{explbound1}. Therefore, by \eqref{geo2a} and by supposition one has
\begin{eqnarray*}
    {\mathop{\mathrm{dil}}}_0^B \gamma=\|f\|_\nu = \|\psi\|_\pi &=&\|p_\pi^{\,\prime}(\varphi)\psi+p_\pi^{\,\prime}(\varphi)^\perp\psi\|_\pi\\
     &=& \sqrt{\|p_\pi^{\,\prime}(\varphi)\psi\|_\pi^2+\|p_\pi^{\,\prime}(\varphi)^\perp\psi\|_\pi^2} \\
     &=& \sqrt{\|\hat{\psi}_0\|_\pi^2+\|p_\pi^{\,\prime}(\varphi)^\perp\psi\|_\pi^2}\\
     &=& \sqrt{\|f\|_\nu^2+\|p_\pi^{\,\prime}(\varphi)^\perp\psi\|_\pi^2}
\end{eqnarray*}
 Hence  $p_\pi^{\,\prime}(\varphi)^\perp\psi={\mathsf{0}}$  by the previous, i.e.~$p_\pi^{\,\prime}(\varphi)\psi=\psi$. From this and $\psi=\zeta-{F}\varphi$ with $F=F(M|\nu,\varrho)$ one infers that $p_\pi^{\,\prime}(\varphi)\zeta=\zeta$. In view of \eqref{geo0a} then  $p_\pi^{\,\prime}(\varphi)\varphi_t=\varphi_t$ follows. Hence, $p_\pi^{\,\prime}(\varphi_t)\leq p_\pi^{\,\prime}(\varphi)$ has to be fulfilled.
\end{proof}
\begin{remark}\label{beispielend}
  The Finslerian law \eqref{fundform4} examined in Theorem \ref{fundform3} is an important special case of  \eqref{fundform1}, but is not the general case. By Lemma \ref{explbound}, formula \eqref{explbound1}, for $\gamma\in \Gamma(M|\nu,\varrho)$ given in Example \ref{exupper} and which is obeying $$\|f\|_\nu<{\mathop{\mathrm{dil}}}_0^B \gamma$$ this e.g.~is evident and in this case $\nu$ cannot be a Finslerian state of $\gamma$.
\end{remark}
\subsection{Curves and submanifolds respecting the Finslerian law}\label{specex}
Suppose the state space ${\mathcal S}(M)$ of the unital ${\mathsf C}^*$-algebra $M\not=\mathbb{C}\,{\mathsf{1}}$  to be equipped with the Bures metric, cf.~Remark \ref{allg2c}.
Let $\gamma:I\ni t\longmapsto \nu_t\in {\mathcal S}(M)$ be a parameterized curve in ${\mathcal S}(M)$. Suppose $\gamma$ admits a $\gamma$-compliant unital $^*$-representation around each of the states it is passing through (call them `inner states'). An important special case of $\gamma$ occurs if the Pythagorean law \eqref{fundform1} at each inner state of $\gamma$  reduces to the Finslerian law \eqref{fundform4}.
\begin{definition}\label{finslercurve}
Refer to $\gamma$ as `Finslerian curve' if at each inner $s\in I$  of $\gamma$ the law
\begin{equation}\label{finsler1a}
       \lim_{t\to s}
\frac{d_B(M|\nu_{t},\nu_s)}{|t-s|}=\|f_s\|_{\nu_s}
\end{equation}
is satisfied, with tangent form $f_s=\nu_t^{\,\prime}|_{t=s}$.
\end{definition}
More generally, in view of Definition \ref{submann}, if the Bures metric is restricted to a submanifold $\Omega\subset {\mathcal{S}}(M)$ of states, then in context of Theorem \ref{fundform} and \eqref{fundform20} it is useful to set up the following  notion and consequence thereof.
\begin{definition}\label{finslermani}
$\Omega$ is termed `Finslerian submanifold' provided
\begin{equation}\label{finslermani0}
{\mathsf{d}}s^2=\|{\mathsf{d}}\nu\|_\nu^2
\end{equation}
is fulfilled, at each $\nu\in \Omega$, with length element ${\mathsf{d}}s={\mathsf{d}}s[\gamma,\nu]$ taken along any parameterized curve $\gamma\subset \Omega$  passing through $\nu$ and admitting a $\gamma$-compliant unital $^*$-representation of $M$ around there.
\end{definition}
\begin{remark}\label{vererb}
Each submanifold  $\Omega_0\subset \Omega$ of a Finslerian submanifold $\Omega$  is Finslerian.
\end{remark}
\subsubsection{Examples of Finslerian curves}\label{fins}
Start with  a parameterized continuous curve $\gamma=(\nu_t)\subset {\mathcal S}(M)$ which is a geodesic arc, see Definition \ref{allg2c0}.
\begin{example}\label{geoa}
Geodesic arcs are Finslerian curves.
\end{example}
\begin{proof}
Let $\gamma:[0,1]\ni t\longmapsto \nu_t\in {\mathcal S}(M)$  be a geodesic arc starting at a state $\nu$ and terminating at another state $\varrho\not=\nu$, $\gamma\in {\mathcal{C}}^{\nu,\varrho}(M)$. Suppose a defining
implementation of $\gamma$ to be given by $[0,1]\ni t\longmapsto \varphi_t\in {\mathcal S}_{\pi,M}(\nu_t)$ over some unital $^*$-representation $\{\pi,{\mathcal H}_\pi\}$ as defined in accordance with \eqref{geo0a} when restricted to the unit interval (cf. Remark \ref{allg2c},\,\eqref{allg2c0}). Obviously, since geodesic arcs are simple curves, each state $\nu_s$ corresponding to parameter values with $0<s<1$ is an inner state. Then, by Corollary \ref{porter3}, $p_\pi^{\,\prime}(\varphi_t)=p_\pi^{\,\prime}(\varphi_s)$, for each $t$ with $0<t<1$. Hence, if for given $s$ a local implementation $\tilde{\varphi}_t=\varphi_{s+t}$ for all $t$ with $|t|$ sufficiently small is considered, we have  $p_\pi^{\,\prime}(\tilde{\varphi}_t)=p_\pi^{\,\prime}(\tilde{\varphi})$ for such $t$, with  $\tilde{\varphi}=\tilde{\varphi}_0$.
Applying
Lemma \ref{fundform7} to $(\tilde{\varphi}_t)$ guarantees that Theorem \ref{fundform3} can be applied to $\tilde{\nu}=\nu_s$ and $\tilde{\nu}_t=\nu_{t+s}$. Hence, the \eqref{fundform5} corresponding formula \eqref{finsler1a} holds. Since $0<s<1$ could have been chosen at will, \eqref{finsler1a} has to be true at any $s\in ]0,1[$.
\end{proof}
Let by $\alpha_t\in {\mathsf{AUT}}(M)$ a uniformly continuous one-parameter group of $^*$-automorph\-isms over $M$ be given. Consider $\gamma:{\mathbb R}\ni t\longmapsto \nu_t\in {\mathcal S}(M)$, with $\nu_t$ defined by
\begin{equation}\label{auto}
\nu_t(x)=\nu(\alpha_t(x))=\nu\circ\alpha_t(x)
\end{equation}
for any $x\in M$ and $t\in {\mathbb R}$. To be non-trivial, suppose that $\nu$ is not a fixed point of $(\alpha_t)$. Refer to the implementable curve $\gamma\subset{\mathcal S}(M)$ as Hamiltonian trajectory.
\begin{example}\label{geoh}
Hamiltonian trajectories are Finslerian curves.
\end{example}
\begin{proof}
According to \eqref{auto},
 $\nu_t=\nu\circ\alpha_t$, for each $t\in{\mathbb R}$, and each $\nu_s$ is an inner state of $\gamma$. Let us chose a  unital $^*$-representation $\{\pi,{\mathcal H}_\pi\}$ such that $\tilde{\varphi}\in {\mathcal S}_{\pi,M}(\nu)$ exists. Then, there exists $h\in \pi(M)^{\,\prime\prime}$ with $h=h^*$ such that $\pi(\alpha_t(x))=\exp (-{\mathsf{i}}ht)\pi(x) \exp ({\mathsf{i}}ht)$, for any $x\in M$, see \cite{Kadi:66, Saka:66}, and the global implementation $(\tilde{\varphi}_t)$ of the Hamiltonian flow given by $\tilde{\varphi}_t=\exp ({\mathsf{i}}h t)\,\tilde{\varphi}$ can be considered. For fixed $s$, and all $t$ from some symmetric interval around zero,  $\varphi_t=\tilde{\varphi}_{s+t}$ yields a local implementation at $\nu_s$. Since  $\tilde{\varphi}_{t+s}=\exp\bigl( {\mathsf{i}}h t \bigr)\,\tilde{\varphi}_s$ is fulfilled,  with $\varphi=\tilde{\varphi}_s$ then
$\varphi_{t}\in \pi(M)^{\,\prime\prime}\varphi\subset [\pi(M)\varphi]$ follows, see \eqref{bas3a}. Hence  $p_\pi^{\,\prime}(\varphi_t)\leq p_\pi^{\,\prime}(\varphi)$, for each $t$ (in fact equality occurs).
Thus Lemma \ref{fundform7} and  Theorem \ref{fundform3} apply and imply that formula \eqref{fundform4} extends to \eqref{finsler1a}, at any $s\in {\mathbb R}$.
\end{proof}
\subsubsection{Examples of Finslerian  submanifolds}\label{finssubman}
Let ${\mathsf{ex}}\,{\mathcal S}(M)$ be the set of extremal points (pure states) of the (affinely) convex set ${\mathcal S}(M)$. By Example \ref{puremann}, for a non-commutative unital ${\mathsf{C}}^*$-algebra $M$ the set ${\mathcal S}^{{\mathsf{ex}}}(M)$ of non-character, pure states
\begin{equation*}
{\mathcal S}^{{\mathsf{ex}}}(M)=\bigl\{ \nu\in {\mathsf{ex}}\,{\mathcal S}(M):\ \exists x,y\in M, \nu(x y)\not=\nu(x)\nu(y)\bigr\}
\end{equation*}
is a submanifold. In respect of the Bures metric in addition the following holds.
\begin{theorem}\label{geoext}
If $M$ is non-commutative, then ${\mathcal S}^{{\mathsf{ex}}}(M)$
is Finslerian.
\end{theorem}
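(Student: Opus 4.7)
The plan is to reduce the claim to a verification of the Finslerian law \eqref{fundform4} at every state of ${\mathcal S}^{{\mathsf{ex}}}(M)$ along every admissible curve, and then to derive that law from the criterion in Lemma \ref{fundform7} by exploiting purity of all states on the curve. So I would start by fixing an arbitrary $\nu \in {\mathcal S}^{{\mathsf{ex}}}(M)$, an arbitrary parameterized curve $\gamma: I\ni t \mapsto \nu_t \in {\mathcal S}^{{\mathsf{ex}}}(M)$ passing through $\nu = \nu_0$, and a $\gamma$-compliant unital $^*$-representation $\{\pi,{\mathcal H}_\pi\}$ around $\nu$ with differentiable local implementation $I_\pi \ni t \mapsto \varphi_t \in {\mathcal S}_{\pi,M}(\nu_t)$, where $\varphi = \varphi_0$.

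The main observation is that since $\nu_t$ is pure for every $t$, Lemma \ref{auxpure} can be applied in both directions to the pair $(\varphi,\varphi_t)$. First, the differentiability of $(\varphi_t)$ at $t=0$ gives $\lim_{t\to 0}\varphi_t = \varphi$, so by continuity of the scalar product there exists $\delta > 0$ with $|\langle \varphi_t,\varphi\rangle_\pi| \geq 1/2$ for all $t\in I_\pi$ with $|t|<\delta$. For such $t$, both $\nu = \nu_0$ and $\nu_t$ are in ${\mathsf{ex}}\,{\mathcal S}(M)$ and $\langle \varphi_t,\varphi\rangle_\pi \neq 0$, so Lemma \ref{auxpure} in its equality form yields $p_\pi^{\,\prime}(\varphi_t) = p_\pi^{\,\prime}(\varphi)$ throughout a symmetric neighbourhood of $0$ of the form $]-\delta,\delta[\,\cap\, I_\pi$. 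In particular the hypothesis $p_\pi^{\,\prime}(\varphi_t) \leq p_\pi^{\,\prime}(\varphi)$ of Lemma \ref{fundform7} is satisfied on a set with limit point zero, and that lemma delivers the Finslerian law \eqref{fundform4} at $\nu$ along $\gamma$.

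Since $\nu\in {\mathcal S}^{{\mathsf{ex}}}(M)$ and the admissible curve $\gamma\subset {\mathcal S}^{{\mathsf{ex}}}(M)$ through $\nu$ were arbitrary, the length-element identity ${\mathsf{d}}s^2 = \|{\mathsf{d}}\nu\|_\nu^2$ holds at every point of ${\mathcal S}^{{\mathsf{ex}}}(M)$ along every such curve, which by Definition \ref{finslermani} means precisely that ${\mathcal S}^{{\mathsf{ex}}}(M)$ is a Finslerian submanifold. I do not foresee a serious obstacle: the only delicate point is confirming that the two-sided application of Lemma \ref{auxpure} is legitimate, which is immediate because every state met by $\gamma$ is pure by hypothesis and $(\varphi_t)$ is norm-continuous at $0$. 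Non-commutativity of $M$ is used only to guarantee, via Example \ref{puremann}, that ${\mathcal S}^{{\mathsf{ex}}}(M)$ is indeed a submanifold of states so that the statement has content.
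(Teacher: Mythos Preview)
Your proof is correct and follows essentially the same route as the paper: continuity of the implementation forces $\langle\varphi_t,\varphi\rangle_\pi\neq 0$ near $t=0$, Lemma \ref{auxpure} (applied symmetrically since both $\nu$ and $\nu_t$ are pure) gives $p_\pi^{\,\prime}(\varphi_t)=p_\pi^{\,\prime}(\varphi)$, and Lemma \ref{fundform7} then yields the Finslerian law. The paper's argument is identical in substance, including the appeal to Example \ref{puremann} for the submanifold property.
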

\begin{proof}
For each $\nu\in {\mathcal S}^{{\mathsf{ex}}}(M)$, we have to show that $\nu$ proves Finslerian in respect of any parameterized curve $\gamma$ obeying $\gamma\subset {\mathcal S}^{{\mathsf{ex}}}(M)$ and passing through $\nu$ and thereby admitting a $\gamma$-compliant unital $^*$-representation $\{\pi,{\mathcal{H}}_\pi\}$ of $M$ around $\nu$. Note that since by assumption and Example \ref{puremann} the set  ${\mathcal S}^{{\mathsf{ex}}}(M)$ is a submanifold of states, then one can be assured that to each $\nu\in {\mathcal S}^{{\mathsf{ex}}}(M)$ a curve  of the mentioned specification and passing through $\nu$ really exists. In line with this, suppose $\gamma\subset {\mathcal S}^{{\mathsf{ex}}}(M)$ be such a curve, with differentiable local implementation $I_\pi\ni t\longmapsto \varphi_t\in {\mathcal S}_{\pi,M}(\nu_t)$,
with symmetric to zero interval of reals $I_\pi$, and obeying $\nu_0=\nu$. Let  $\varphi=\varphi_0$. Note that by continuity of the implementation $(\varphi_t)$ at $t=0$, we have $\lim_{t\to 0} \langle\varphi_t,\varphi\rangle_\pi=1$. Hence,  $\langle\varphi_t,\varphi\rangle_\pi\not=0$ for all $t$ sufficiently close to zero, with
$\varphi_t$ and $\varphi$ implementing the pure states $\nu_t$ and $\nu$, respectively. Application of Lemma \ref{auxpure} then yields $p_\pi^{\,\prime}(\varphi)= p_\pi^{\,\prime}(\varphi_t)$, for all those parameter values.  Applying Lemma \ref{fundform7} and  Theorem \ref{fundform3} to $(\varphi_t)$ at $t=0$ implies $\nu$ to be a Finslerian state of $\gamma$.
\end{proof}
\begin{remark}\label{fsmetrik1}
Relating Theorem \ref{geoext}, note that provided $M$ is a non-commutative ${\mathsf{W}}^*$-algebra possessing normal pure states, then the set $${\mathcal{S}}^{{\mathsf{ex}}}_0(M)={\mathcal{S}}^{{\mathsf{ex}}}(M)\cap {\mathcal{S}}_0(M)$$ is a  submanifold of normal states, see Corollary \ref{wstar2}. In this situation the arguments raised in the proof of Theorem \ref{geoext} apply  as well and show that when endowed with the Bures metric ${\mathcal{S}}^{{\mathsf{ex}}}_0(M)$ has to be    Finslerian in either case.
\end{remark}
Let $\Omega\subset {\mathcal S}(M)$ be a stratum. That is, $\Omega$ is a maximal subset of mutually absolutely continuous states, see Definition \ref{absstet} and below, and which according to Lemma \ref{setprop} is a special (affinely) convex subset of states, which is a submanifold of states iff $\Omega=\Omega_M(\varrho)$ holds, for some mixed state $\varrho\in {\mathcal{S}}(M)$, see Theorem \ref{substrat}.
\begin{theorem}\label{geostrat}
A stratum generated by a mixed state is a  Finslerian submanifold.
\end{theorem}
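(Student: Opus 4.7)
\smallskip\noindent
\emph{Proof plan.} By Definition \ref{finslermani} and Theorem \ref{fundform3} it suffices to show, for each $\nu\in\Omega=\Omega_M(\varrho)$ and each parameterized curve $\gamma=(\nu_t)\subset\Omega$ passing through $\nu_0=\nu$ and admitting a $\gamma$-compliant unital $^*$-representation $\{\pi,{\mathcal H}_\pi\}$ around $\nu$, that the Finslerian law \eqref{fundform4} holds at $\nu$. By Theorem \ref{substrat} the hypothesis $\varrho\in{\mathcal S}^{\mathsf{mix}}(M)$ guarantees that $\Omega$ is indeed a submanifold of states, so curves of the required type through each $\nu\in\Omega$ exist.

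Fix such a $\nu$, a $\gamma$-compliant representation $\{\pi,{\mathcal H}_\pi\}$, and an implementing family $I_\pi\ni t\mapsto\varphi_t\in{\mathcal S}_{\pi,M}(\nu_t)$ which is differentiable at $t=0$, with $\varphi=\varphi_0$ and tangent vector $\psi=\varphi_t^{\,\prime}|_{t=0}$. The defining property of the stratum gives $\nu_t\dashv\vdash\nu$ for every $t\in I_\pi$; applying Theorem \ref{suppabsstet}\,\eqref{supp1}$\Leftrightarrow$\eqref{supp2} in both directions to the pair $\nu_t,\nu$ yields the equality of supports $s((\nu_t)_\pi)=s(\nu_\pi)$, and by formula \eqref{traeger} this amounts to $p_\pi(\varphi_t)=p_\pi(\varphi)$ for every such $t$.

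Consequently $\varphi_t\in p_\pi(\varphi){\mathcal H}_\pi=[\pi(M)^{\,\prime}\varphi]$ for all $t$ in some neighborhood of $0$. Lemma \ref{fundform7} is then directly applicable (with the trivial case of equality $p_\pi(\varphi_t)\leq p_\pi(\varphi)$), so the Finslerian law \eqref{fundform4} is in force at $\nu$ along $\gamma$. Equivalently, passing to the differential quotient, $p_\pi(\varphi)\psi=\psi$, hence $p_\pi(\varphi)^\perp p_\pi^{\,\prime}(\varphi)^\perp\psi=0$, and the invariant \eqref{inv2} vanishes; so the Pythagorean law \eqref{fundform1} collapses to \eqref{fundform4}, as required. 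Since $\nu\in\Omega$ and $\gamma\subset\Omega$ through $\nu$ were arbitrary, Definition \ref{finslermani} yields that $\Omega$ is Finslerian.

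There is no real obstacle in this argument; the only point that requires a moment of attention is that mutual absolute continuity must be used to transfer the support equality to the implementing vectors uniformly in $t$ within the given representation $\pi$, which is precisely what Theorem \ref{suppabsstet} plus \eqref{traeger} deliver. The rest is a direct appeal to Lemma \ref{fundform7}/Theorem \ref{fundform3}.
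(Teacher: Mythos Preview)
Your proof is correct and follows essentially the same route as the paper's own argument: use Theorem \ref{substrat} to ensure $\Omega$ is a submanifold, deduce from $\nu_t\dashv\vdash\nu$ via Theorem \ref{suppabsstet} and \eqref{traeger} that $p_\pi(\varphi_t)=p_\pi(\varphi)$, and then invoke Lemma \ref{fundform7} together with Theorem \ref{fundform3}. Your additional remark that $p_\pi(\varphi)\psi=\psi$ kills the invariant $I_\nu(M|\gamma)$ directly is a harmless elaboration of the same point.
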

\begin{proof}
Let $\Omega$ be a stratum generated by a mixed state. Let $\nu\in \Omega$ be arbitrarily chosen, and be $\gamma: I\ni t\longmapsto \nu_t\in \Omega$ a parameterized curve passing through $\nu$ at $t=0$, and be $\{\pi,{\mathcal H}_\pi\}$ a unital $^*$-representation of $M$ such that $I_\pi\ni t\longmapsto \varphi_t\in {\mathcal S}_{\pi,M}(\nu_t)$,
with some symmetric around zero interval of reals $I_\pi$, is a differentiable at $t=0$ implementation of $\gamma$ around $\nu$. Since according to Theorem \ref{substrat} under the assumption at hand $\Omega$  is a submanifold, such curves really exist. Let $\varphi=\varphi_0$. Note that $\nu, \nu_t\in \Omega$ means that $\nu_t\dashv\vdash \nu$ is fulfilled,  for each $t\in I_\pi$. By Theorem \ref{suppabsstet} and \eqref{traeger} this implies $p_\pi(\varphi_t)=p_\pi(\varphi)$, for all  $t\in I_\pi$.  In view of this and Theorem \ref{fundform3}, \eqref{finslermani0} is obtained from  \eqref{fundform2} by applying Lemma \ref{fundform7} with  $(\varphi_t)$.
\end{proof}
As a simple application of Theorem \ref{geostrat}, consider the Bures metric in restriction to the  set of faithful  normal states ${\mathcal S}^{\mathsf{faithful}}_0(M)$ over a non-trivial
${\mathsf W}^*$-algebra $M$.
\begin{example}\label{faithst}
${\mathcal S}^{\mathsf{faithful}}_0(M)$ is a Finslerian submanifold.
\end{example}
\begin{proof}
If $M\not=\mathbb{C
}\,{\mathsf{1}}$ is a ${\mathsf{W}}^*$-algebra with faithful normal state $\omega$, then by Theorem \ref{wstar}\,\eqref{wstar2d} the stratum $\Omega_M(\omega)$ is a submanifold of normal states. By Corollary \ref{standform}, $s(\omega)={\mathsf{1}}$  implies each normal state of full support to belong to the $\omega$-stratum, i.e.~
\begin{equation}\label{stratbed}
    {\mathcal S}_0^{\mathsf{faithful}}(M)= \Omega_M(\omega)
\end{equation}
has to be fulfilled. In view of  Theorem \ref{geostrat} the example then obviously follows.
\end{proof}
\begin{remark}\label{vererbspec}
Over a non-trivial
${\mathsf W}^*$-algebra $M$ possessing  faithful normal states, each submanifold of ${\mathcal S}_0^{\mathsf{faithful}}(M)$ is Finslerian, see Example \ref{faithst} and Remark  \ref{vererb}.
\end{remark}
In case of $M={\mathsf B}({\mathcal H})$, for separable ${\mathcal{H}}$ with ${\mathsf{dim}}{\mathcal H}\geq 2$, when discussing submanifolds of faithful normal states, the states of ${\mathcal S}^{\mathsf{faithful}}_0({\mathsf B}({\mathcal H}))$ often will be identified with the density operators of full support over $\mathcal H$, see the procedures mentioned on in \ref{beimatrix}. But in contrast to the use in \ref{beimatrix}, subsequently simply the same abbreviating notation $\varrho$ will be used for both, the state and the density operator in question.
\begin{definition}\label{blatt}
For fixed $\mu\in {\mathcal S}^{\mathsf{faithful}}_0({\mathsf B}({\mathcal H}))$ the subset given as
\[
{\mathcal S}_0(\mu)=\bigl\{\varrho\in {\mathcal S}^{\mathsf{faithful}}_0({\mathsf B}({\mathcal H})):\ \varrho\mu=\mu \varrho\bigr\}
\]
will be referred to as $\mu$-leaf (within the density operators of full support). Thereby, in the defining relation the operator multiplication is meant.
\end{definition}
\begin{example}\label{fins1}
Each $\mu$-leaf is a Finslerian submanifold of states.
\end{example}
\begin{proof}
Since by definition ${\mathcal S}_0(\mu)\subset {\mathcal S}^{\mathsf{faithful}}_0({\mathsf B}({\mathcal H}))$ is fulfilled, in line with Remark \ref{vererbspec} the assertion follows if it can be verified that each $\mu$-leaf is a submanifold of states in terms of Definition \ref{submann}. This we are going to do now, for a density operator $\mu$ of full support chosen at will.
Remark that $\mu\in {\mathcal S}_0(\mu)$ holds. Let $\mathsf{spec}_p(\mu)=\{\mu_1,\mu_2,\mu_3,\ldots\}$ be the point spectrum of $\mu$, with $\mu_{k-1} >\mu_k$ for all values from the range of $k$ if the range has cardinality larger than one. Thus, either ${\mathsf{dim}} {\mathcal H}=\infty$ holds, and then $k$ is extending over ${{\mathbb{N}}}$, or $k$ is bounded above as $k\leq \# \mathsf{spec}_p(\mu)$, with $\# \mathsf{spec}_p(\mu)\leq {\mathsf{dim}} {\mathcal H}<\infty$ in the finite dimensional case. Also, there is a unique decomposition $\{\Delta P_k\}$ of the unity $\mathsf 1$ into mutually orthogonal orthoprojections $\Delta P_k$ such that
\begin{equation}\label{blatterz}
\mu=\sum_k \mu_k \Delta P_k
\end{equation}
holds, with $m_k(\mu)={\mathsf{dim}}\,\Delta P_k{\mathcal H}<\infty$ being the (finite) multiplicity of the $k$-th proper value of $\mu$. Refer to \eqref{blatterz} as spectral representation of $\mu$. Also, in defining $$P_k=\sum_{j\leq k} \Delta P_j$$
for each $k$, an ascendingly directed system ${\mathcal P}_\mu=\{P_k\}$ of orthoprojections to the given $\mu$ can be associated with which is obeying ${\mathsf{dim}}\,P_k{\mathcal H}<\infty$, for all subscripts $k$, and ${\mathsf{l.u.b.}}\{P_k\}={\mathsf{1}}$.
Obviously, a density operator $\varrho$ with spectral representation
$$\varrho=\sum_j \lambda_j\Delta Q_j$$
belongs to ${\mathcal S}_0(\mu)$ if, and only if,  for all $k$ and $j$ the following relations hold:
\begin{equation}\label{blatterz1}
\Delta Q_k\, \Delta P_j=\Delta P_j\, \Delta Q_k
\end{equation}
The map
$
\Phi_\mu(\cdot)=\sum_k \Delta P_k(\cdot)\Delta P_k
$
over ${\mathsf B}({\mathcal H})$ and unital subalgebra $\sum_k^\oplus {\mathsf B}(\Delta P_k {\mathcal H})$ will be considered. The latter is the fixed point algebra of the conditional expectation $\Phi_\mu$.
As a consequence of \eqref{blatterz} and \eqref{blatterz1} the following representation formulae are  obtained.
\begin{subequations}\label{blattdar}
\begin{eqnarray}\label{blattdar1}
\ \ \ \ \ \ \ \ \ &{\mathcal S}_0(\mu)&=\ \ \Phi_\mu\bigl({\mathcal S}^{\mathsf{faithful}}_0({\mathsf B}({\mathcal H}))\bigr)\\
\label{blattdar2}
&&= \Bigl\{\varrho\in {\mathcal S}^{\mathsf{faithful}}_0({\mathsf B}({\mathcal H})):\ \varrho\,\Delta P_k=\Delta P_k\, \varrho, \forall k\Bigr\}\\
\label{blattdar3}
&&= \Bigl\{\varrho\in {\mathcal S}^{\mathsf{faithful}}_0({\mathsf B}({\mathcal H})):\ \ \ \,\varrho\, P_k= P_k\, \varrho, \ \ \,\forall k\Bigr\}
\end{eqnarray}
\end{subequations}
Formula \eqref{blattdar3} amounts to
$
{\mathcal S}_0(\mu)=\Omega_{{\mathcal P}_\mu}({\mathcal H})
$,
with $\Omega_{{\mathcal P}_\mu}({\mathcal H})$ arising from $\Omega_{{\mathcal P}}({\mathcal H})$ as defined in Example \ref{Bopex} at  ${\mathcal P}={\mathcal P}_\mu$. But following Example \ref{imbed0ex}, $\Omega_{{\mathcal P}_\mu}({\mathcal H})$ is a submanifold of states. Hence, ${\mathcal S}_0(\mu)$ is defining a submanifold of states.
\end{proof}
\begin{remark}\label{geomann0}
As a consequence of \eqref{blattdar1}, ${\mathcal S}^{\mathsf{faithful}}_0({\mathsf B}({\mathcal H}))={\mathcal S}_0(\mu)$ can happen. In fact, the latter will occur, if and only if, $\mu$ is obeying $\# \mathsf{spec}_p(\mu)=1$. Due to compactness of $\mu$, the latter condition can be satisfied only if $ {\mathsf{dim}} {\mathcal H}<\infty$ holds and $\mu$ is equal to the equipartition $\mu=(1/{ {\mathsf{dim}} {\mathcal H}})\,{\mathsf 1}$ there, see also Remark \ref{Bop1c} \eqref{Bop1ca}.
\end{remark}
\subsection{Analyzing the structure of the examples}\label{fins2}
The Finslerian submanifolds previously mentioned will be analyzed more in detail. Special attention will be paid to convexity requirements which by a  submanifold $\Omega$ might be fulfilled when considered as a subset of the state space under the Bures metric, see Definiton \ref{geoconvex}.
\subsubsection{The geometry of pure states}\label{fins1a}
The Finslerian submanifold $\Omega={\mathcal{S}}^{{\mathsf{ex}}}(M)$ of all pure states of a non-commutative, unital ${\mathsf{C}}^*$-algebra $M$ will be considered, see  Theorem \ref{geoext}.
The first aim will be to clarify the geodesic structure of  ${\mathcal{S}}^{{\mathsf{ex}}}(M)$ under the Bures metric.
\begin{lemma}\label{geoextarc}
Let $M$ be non-commutative, $\varrho,\nu\in {\mathcal{S}}^{{\mathsf{ex}}}(M)$ with $\varrho\not\perp \nu$. There exists exactly one geodesic arc $\gamma \in {\mathcal{C}}^{{\nu,\varrho}}(M)$. Also, $\gamma$ is obeying $|\gamma|\subset {\mathcal{S}}^{{\mathsf{ex}}}(M)$.
\end{lemma}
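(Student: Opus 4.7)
The natural strategy is to handle uniqueness and the inclusion $|\gamma|\subset \mathcal{S}^{\mathsf{ex}}(M)$ separately. For uniqueness I would invoke Theorem~\ref{einga}, which reduces the claim to showing that $\varrho_\nu$ and $\nu_\varrho$ are mutually disjoint. Because $\varrho$ is extremal, every $\mu\in M^*_+$ with $\mu\leq \varrho$ must be of the form $\mu=\lambda\varrho$ for some $0\leq \lambda\leq 1$; if in addition $\mu\perp \nu$ with $\lambda>0$, then $\varrho\perp \nu$, which by Corollary~\ref{subadd}\,\eqref{subadd2} would contradict the hypothesis $\varrho\not\perp \nu$ (equivalently $F(M|\nu,\varrho)\not=0$). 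Hence $\varrho_\nu={\mathsf 0}$, and symmetrically $\nu_\varrho={\mathsf 0}$, so the two forms are disjoint in the trivial way and Theorem~\ref{einga} gives a unique $\gamma\in \mathcal{C}^{\nu,\varrho}(M)$.

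Next I would fix a unital $^*$-representation $\{\pi,\mathcal{H}_\pi\}$ in which both $\pi$-fibres are non-trivial, and invoke Theorem~\ref{positiv1} to choose $\varphi\in\mathcal{S}_{\pi,M}(\nu)$ and $\zeta\in\mathcal{S}_{\pi,M}(\varrho)$ with $\langle\zeta,\varphi\rangle_\pi=F(M|\nu,\varrho)\not=0$. By Definition~\ref{allg2c0} combined with \eqref{geo0a}, the unique geodesic arc is implemented by
\[
\varphi_t=t\,\zeta+\lambda(t)\,\varphi,\qquad t\in[0,1].
\]
Since $\nu,\varrho$ are both extremal and $\langle\zeta,\varphi\rangle_\pi\neq 0$, Lemma~\ref{auxpure} applied symmetrically yields $p_\pi^{\,\prime}(\varphi)=p_\pi^{\,\prime}(\zeta)$. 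Combined with Corollary~\ref{porter3}, formula~\eqref{porter4a} then gives
\[
p_\pi^{\,\prime}(\varphi_t)=p_\pi^{\,\prime}(\zeta)\vee p_\pi^{\,\prime}(\varphi)=p_\pi^{\,\prime}(\varphi)
\]
for every $t\in(0,1)$. This is the key technical step: it confirms that the cyclic subspace $[\pi(M)\varphi_t]$ coincides with $[\pi(M)\varphi]$ throughout the inner of the arc.

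The final step is to read off purity and the non-character property along $\gamma$. Since $\nu$ is extremal, $\pi$ acts irreducibly on $[\pi(M)\varphi]=p_\pi^{\,\prime}(\varphi)\mathcal{H}_\pi$; but then the GNS representation generated by $\nu_t$ (carried by $\varphi_t$ in the same cyclic subspace) is unitarily equivalent to this irreducible representation, so $\nu_t\in {\mathsf{ex}}\,\mathcal{S}(M)$. To rule out $\nu_t$ being a character I would invoke Lemma~\ref{char1}: $\nu$ being non-character forces $\dim[\pi(M)\varphi]>1$, and the same cyclic subspace carries $\nu_t$, so by Lemma~\ref{char1} again $\nu_t$ cannot be a character either. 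Hence $\nu_t\in \mathcal{S}^{\mathsf{ex}}(M)$ for every inner $t$, and the two endpoints belong to $\mathcal{S}^{\mathsf{ex}}(M)$ by hypothesis, completing the inclusion $|\gamma|\subset\mathcal{S}^{\mathsf{ex}}(M)$. The one place to be careful is the interplay in the central identification of cyclic subspaces via Lemma~\ref{auxpure} and Corollary~\ref{porter3}; once that is in hand, both claims are immediate.
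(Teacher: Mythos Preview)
Your proposal is correct and follows essentially the same route as the paper: both establish $\varrho_\nu=\nu_\varrho={\mathsf 0}$ from extremality plus $\varrho\not\perp\nu$ to invoke Theorem~\ref{einga}, then use Lemma~\ref{auxpure} to obtain $p_\pi^{\,\prime}(\zeta)=p_\pi^{\,\prime}(\varphi)$ and conclude purity and the non-character property from irreducibility of $\pi$ on $p_\pi^{\,\prime}(\varphi){\mathcal H}_\pi$ together with Lemma~\ref{char1}. The only minor difference is your detour through Corollary~\ref{porter3} to get $p_\pi^{\,\prime}(\varphi_t)=p_\pi^{\,\prime}(\varphi)$; the paper bypasses this by simply observing that $\varphi_t$ is a linear combination of $\zeta$ and $\varphi$, both already in $p_\pi^{\,\prime}(\varphi){\mathcal H}_\pi$, so any unit vector there (being cyclic under the irreducible action) implements a pure non-character state.
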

\begin{proof}
For $\mu=\varrho_\nu\in M_+^*$ given in accordance with Lemma \ref{diffreteile},  ${\mathsf{0}}\leq \mu\leq \varrho$ and $\mu\perp \nu$ hold. For $\varrho$ being a pure state, necessarily $\mu=s\varrho$ follows, with $s\in [0,1]$. Thereby, note that $s\not=0$ cannot occur, for in that case $s\varrho \perp \nu$ would contradict that $\varrho\not\perp \nu$ by assumption. Hence $\varrho_\nu={\mathsf{0}}$, and by symmetry $\nu_\varrho={\mathsf{0}}$. Thus, in a trivial manner $\varrho_\nu$ and $\nu_\varrho$ are mutually disjoint. By Theorem \ref{einga} a geodesic arc $\gamma\in  \mathcal{C}^{\nu,\varrho}(M)$ then must be uniquely determined.
Let $\{\pi,{\mathcal{H}}_\pi\}$ be a unital $^*$-representation with non-trivial $\pi$-fibres for $\nu$ and $\varrho$. Then,  $\xi\in {\mathcal{S}}_{\pi,M}(\varrho)$ and $\varphi\in  {\mathcal{S}}_{\pi,M}(\nu)$ with $\langle\xi,\varphi\rangle_\pi=F(M|\nu,\varrho)$ exist. By uniqueness of  $\gamma=(\nu_t)$ and in accordance with formula  \eqref{geo0a}, in respect of $\{\pi,{\mathcal{H}}_\pi\}$ the states $\nu_t$ are implemented by $\varphi_t=t\xi+\lambda(t)\varphi\in  {\mathcal{S}}_{\pi,M}(\nu_t)$, $t\in [0,1]$. By Corollary \ref{subadd}\,\eqref{subadd2} the condition $\varrho\not\perp \nu$ implies $F(M|\nu,\varrho)\not=0$. Hence  $\langle\xi,\varphi\rangle_\pi\not=0$, and thus Lemma \ref{auxpure} can be applied and yields $p_\pi^{\,\prime}(\xi)=p_\pi^{\,\prime}(\varphi)$, and $\pi$ is irreducibly acting over $p_\pi^{\,\prime}(\varphi){\mathcal{H}}_\pi$, that is,  $$\pi(M)^{\,\prime\prime}p_\pi^{\,\prime}(\varphi)={\mathsf{B}}\bigl(p_\pi^{\,\prime}(\varphi){\mathcal{H}}_\pi\bigr)$$ is fulfilled. Also, since owing to $\nu\in {\mathcal{S}}^{{\mathsf{ex}}}(M)$ the state $\nu$ cannot be a character state, according to Lemma \ref{char1} then   ${\mathsf{dim}}\,p_\pi^{\,\prime}(\varphi){\mathcal{H}}_\pi>1$ must hold. Accordingly,  each non-zero unit vector of $p_\pi^{\,\prime}(\varphi){\mathcal{H}}_\pi$ has to be cyclic and has to be implementing a pure state which cannot be a character state either. Thus especially, each state $\nu_t$ with $t\in ]0,1[$ has to be a pure but  non-character state, too. By definition of ${\mathcal{S}}^{{\mathsf{ex}}}(M)$ this is the same as asserting that $\nu_t\in {\mathcal{S}}^{{\mathsf{ex}}}(M)$, for any $t\in [0,1]$.
\end{proof}
\begin{lemma}\label{geoextarc1}
Let $M$ be non-commutative, with states  $\varrho,\nu\in {\mathcal{S}}^{{\mathsf{ex}}}(M)$ which are mutually disjoint. Then, there exists exactly one geodesic arc $\gamma \in {\mathcal{C}}^{\nu,\varrho}(M)$. In this case $\gamma$ is obeying $|\gamma|\cap {\mathcal{S}}^{{\mathsf{ex}}}(M)=\{\nu,\varrho\}$.
\end{lemma}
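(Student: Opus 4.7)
The plan is to handle the two assertions separately, leveraging the Radon--Nikodym type machinery developed in the paper to dispose of uniqueness, and then compute the geodesic arc explicitly to identify its pure states.

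For uniqueness, I would invoke Theorem \ref{einga}, which reduces the matter to showing that $\varrho_\nu$ and $\nu_\varrho$ are mutually disjoint. Mutual disjointness of $\nu$ and $\varrho$ in particular implies orthogonality $\nu\perp\varrho$, so $\varrho$ itself satisfies $\varrho\leq\varrho$ and $\varrho\perp\nu$; by the maximality characterization in Lemma \ref{diffreteile}, this forces $\varrho_\nu=\varrho$, and symmetrically $\nu_\varrho=\nu$. The disjointness of $\varrho_\nu$ and $\nu_\varrho$ is then simply the disjointness of $\varrho$ and $\nu$ assumed in the hypothesis, and Theorem \ref{einga} yields existence and uniqueness of $\gamma\in {\mathcal{C}}^{\nu,\varrho}(M)$.

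For the structural claim about $|\gamma|\cap {\mathcal{S}}^{{\mathsf{ex}}}(M)$, I would pick a unital $^*$-representation $\{\pi,{\mathcal H}_\pi\}$ with non-trivial fibres for both states and, following the construction of Example \ref{ex2}, realize $\gamma:[0,1]\ni t\mapsto \nu_t$ via $\varphi_t=t\,\xi+\lambda(t)\,\varphi$ with $\xi\in {\mathcal S}_{\pi,M}(\varrho)$, $\varphi\in {\mathcal S}_{\pi,M}(\nu)$ and $\langle\xi,\varphi\rangle_\pi=F(M|\nu,\varrho)$. By Corollary \ref{subadd}\,\eqref{subadd2}, the orthogonality $\nu\perp\varrho$ (which is part of disjointness) gives $F(M|\nu,\varrho)=0$, so $\lambda(t)=\sqrt{1-t^2}$. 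The stronger disjointness hypothesis moreover places $\pi(M)\xi$ and $\pi(M)\varphi$ in mutually orthogonal central subspaces of ${\mathcal H}_\pi$; consequently $\langle\pi(x)\xi,\pi(y)\varphi\rangle_\pi=0$ for all $x,y\in M$, and expanding $\nu_t(x)=\langle\pi(x)\varphi_t,\varphi_t\rangle_\pi$ collapses the cross terms, leaving the explicit formula
\begin{equation*}
\nu_t=t^2\,\varrho+(1-t^2)\,\nu,\qquad t\in[0,1].
\end{equation*}
For $t\in (0,1)$, this is a proper convex combination of $\nu\not=\varrho$, so $\nu_t\in {\mathcal S}^{\mathsf{mix}}(M)$ and in particular $\nu_t\not\in {\mathsf{ex}}\,{\mathcal S}(M)\supset {\mathcal{S}}^{{\mathsf{ex}}}(M)$. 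Since the endpoints belong to ${\mathcal{S}}^{{\mathsf{ex}}}(M)$ by assumption, the equality $|\gamma|\cap {\mathcal{S}}^{{\mathsf{ex}}}(M)=\{\nu,\varrho\}$ follows.

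There is no substantial obstacle here; the only delicate point is resisting the temptation to prove more than is claimed. Disjointness is used twice in essentially different ways: first, to conclude $\varrho_\nu=\varrho$ and $\nu_\varrho=\nu$ (where only $\nu\perp\varrho$ is really needed) and to trigger Theorem \ref{einga}, and second, in the stronger form ensuring $\pi(M)\xi\perp\pi(M)\varphi$, which is what kills the cross terms in the expansion of $\nu_t$ and turns the geodesic arc into the affine segment between $\nu$ and $\varrho$. This is in sharp contrast with Lemma \ref{geoextarc}, where $\varrho\not\perp\nu$ guaranteed $p_\pi^{\,\prime}(\xi)=p_\pi^{\,\prime}(\varphi)$ and thereby placed every $\nu_t$ inside ${\mathcal{S}}^{{\mathsf{ex}}}(M)$.
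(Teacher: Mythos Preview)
Your proof is correct and follows essentially the same route as the paper's: identify $\varrho_\nu=\varrho$, $\nu_\varrho=\nu$, invoke Theorem \ref{einga} for uniqueness, then use the central decomposition coming from disjointness to kill the cross terms and obtain $\nu_t=t^2\varrho+(1-t^2)\nu$, which is visibly mixed for $t\in(0,1)$. Your derivation of $\varrho_\nu=\varrho$ is in fact slightly cleaner than the paper's---you observe directly that $\varrho$ itself satisfies $\varrho\leq\varrho$ and $\varrho\perp\nu$, so the maximality in Lemma \ref{diffreteile} forces $\varrho_\nu=\varrho$ without appealing to purity, whereas the paper first uses purity to write $\varrho_\nu=s\varrho$ and then argues $s=1$.
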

\begin{proof}
Let $\{\pi,{\mathcal{H}}_\pi\}$ be a unital $^*$-representation of $M$ such that $\xi\in {\mathcal{S}}_{\pi,M}(\varrho)$ and $\varphi\in  {\mathcal{S}}_{\pi,M}(\nu)$ exist. Let $N=\pi(M)^{\,\prime\prime}$. By mutual disjointness of $\nu$ and $\varrho$, there is an orthoprojection $z\in N\cap N^{\,\prime}$ such that $z\xi=\xi$ and $z^\perp \varphi=\varphi$. Thus  $$h^\pi_{\xi,\varphi}={\mathsf{0}}$$ and from which $F(M|\nu,\varrho)=0$ follows, by Lemma \ref{bas3}\,\eqref{bas3b}. From this in view of Corollary \ref{subadd}\,\eqref{subadd2}  orthogonality $\nu \perp \varrho$ follows. Now,
for $\mu=\varrho_\nu\in M_+^*$ given in accordance with Lemma \ref{diffreteile}, ${\mathsf{0}}\leq \mu\leq \varrho$ and $\mu\perp \nu$ have to hold. Since $\varrho$ is a pure state, necessarily $\mu=s\varrho$ follows, with $s\in [0,1]$. But since  $\nu\perp \varrho$ holds, only $s=1$ is possible. That is, $\varrho_\nu=\varrho$. Also, by symmetry $\nu_\varrho=\nu$ must hold. Since $\nu$ and $\varrho$ are mutually disjoint, in view of Theorem \ref{einga} from the previous uniqueness of $\gamma\in \mathcal{C}^{\nu,\varrho}(M)$ is obtained. Accordingly, if $\gamma=(\nu_t)$, then by uniqueness
$$[0,1]\ni t\longmapsto\varphi_t=t\xi+\sqrt{1-t^2}\,\varphi$$ is implementing the geodesic arc $\gamma\in {\mathcal{C}}^{\nu,\varrho}(M)$.  Thus $\nu_t=t^2\varrho+(1-t^2)\,\nu$, for $t\in [0,1]$. From this $|\gamma|\cap {\mathcal{S}}^{{\mathsf{ex}}}(M)=\{\nu,\varrho\}$ is obvious.
\end{proof}
\begin{lemma}\label{geoextarc2}
Let $M$ be non-commutative. For states    $\varrho,\nu\in {\mathcal{S}}^{{\mathsf{ex}}}(M)$ with $\varrho\perp \nu$ but which are not mutually disjoint, there exists  $\gamma \in {\mathcal{C}}^{\nu,\varrho}(M)$ with   $|\gamma|\cap {\mathcal{S}}^{{\mathsf{ex}}}(M)=\{\nu,\varrho\}$ as well as such obeying $|\gamma|\subset {\mathcal{S}}^{{\mathsf{ex}}}(M)$. The degeneracy of the set of all geodesic arcs in ${\mathcal{C}}^{\nu,\varrho}(M)$ possessing the latter property corresponds to the cardinality of the real continuum, $\# \{\gamma \in {\mathcal{C}}^{\nu,\varrho}(M): |\gamma|\subset {\mathcal{S}}^{{\mathsf{ex}}}(M)\}=\mathfrak{c}$.
\end{lemma}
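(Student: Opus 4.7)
My plan proceeds in three stages.

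\emph{Setup.} Since $\nu$ and $\varrho$ are pure and not mutually disjoint, their GNS representations are unitarily equivalent irreducible $^*$-representations, and hence both states appear as vector states in a common irreducible unital $^*$-representation $\{\pi_0,\mathcal H_0\}$ via $\varphi_0\in\mathcal S_{\pi_0,M}(\nu)$ and $\xi_0\in \mathcal S_{\pi_0,M}(\varrho)$. Because $\nu\in \mathcal S^{\mathsf{ex}}(M)$ excludes characters, Lemma~\ref{char1} gives $\dim\mathcal H_0\ge 2$; irreducibility yields $\pi_0(M)^{\,\prime}=\mathbb C\,{\mathsf 1}$ and hence, by \eqref{bas4}, $\mathcal S_{\pi_0,M}(\varrho)=\{e^{{\mathsf i}\theta}\xi_0:\theta\in[0,2\pi)\}$. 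From $\nu\perp\varrho$ and Corollary~\ref{subadd}\,\eqref{subadd2} we get $F(M|\nu,\varrho)=0$, so $\langle\xi_0,\varphi_0\rangle_{\pi_0}=0$ automatically. Moreover, $\nu\perp\varrho$ forces $\nu_\varrho=\nu$ and $\varrho_\nu=\varrho$ (each being the largest positive form dominated by itself and orthogonal to the other), so the hypothesised non-disjointness of $\nu,\varrho$ transfers verbatim to $\nu_\varrho,\varrho_\nu$, and Theorem~\ref{einga} already delivers non-uniqueness of the geodesic arc in $\mathcal C^{\nu,\varrho}(M)$.

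\emph{Two kinds of arcs.} For each $\theta\in[0,2\pi)$ the implementation $\varphi_t^\theta=t\,e^{{\mathsf i}\theta}\xi_0+\sqrt{1-t^2}\,\varphi_0$, $t\in[0,1]$, in $\{\pi_0,\mathcal H_0\}$ defines through \eqref{geo0a} (with $\lambda(t)=\sqrt{1-t^2}$, as $F=0$) a geodesic arc $\gamma_\theta\in\mathcal C^{\nu,\varrho}(M)$; irreducibility together with $\dim\mathcal H_0\ge 2$ ensures that every non-zero vector of $\mathcal H_0$ implements a pure non-character state of $M$, so $|\gamma_\theta|\subset \mathcal S^{\mathsf{ex}}(M)$. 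For an arc meeting $\mathcal S^{\mathsf{ex}}(M)$ only at the endpoints, pass to the reducible $\{\pi_0\oplus\pi_0,\mathcal H_0\oplus\mathcal H_0\}$ with $\varphi=(\varphi_0,0)\in\mathcal S_{\pi,M}(\nu)$ and $\xi=(0,\xi_0)\in\mathcal S_{\pi,M}(\varrho)$; then $\langle\xi,\varphi\rangle=0=F$, and \eqref{geo0a} furnishes a geodesic arc $\gamma_\infty\in\mathcal C^{\nu,\varrho}(M)$ whose states are the linear convex combinations $\nu_t=(1-t^2)\nu+t^2\varrho$, each inner $\nu_t$ being mixed as a proper convex combination of the distinct pure states $\nu\neq\varrho$.

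\emph{Cardinality.} By Lemma~\ref{repf} applied in the irreducible $\pi_0$, every $f\in{\mathfrak F}(M|\nu,\varrho)$ takes the form $f=c\,g$ with $c\in\mathbb C$, $|c|\le 1$, where $g(x):=\langle\pi_0(x)\xi_0,\varphi_0\rangle_{\pi_0}\not\equiv 0$ (otherwise $\xi_0\perp[\pi_0(M)\varphi_0]=\mathcal H_0$). For the continuum lower bound, $\gamma_\theta$ corresponds to $c=e^{{\mathsf i}\theta}$, and distinct $\theta,\theta'\in[0,2\pi)$ give distinct arcs via \eqref{einga1} since $\theta\mapsto\re(e^{{\mathsf i}\theta}g)$ is injective---a nonzero complex-linear form on $M$ cannot be purely imaginary on all of $M$, so $\im(e^{{\mathsf i}\alpha}g)\not\equiv 0$ for every $\alpha\in\mathbb R$. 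The matching upper bound is the main obstacle. I would argue it by first noting that for any $c$ with $|c|<1$ the corresponding arc can be realised concretely inside $\pi_0\oplus\pi_0$ via $\xi=(c\,\xi_0,\sqrt{1-|c|^2}\,\xi_0)$ and $\varphi=(\varphi_0,0)$, and the implementing family $\varphi_t=(tc\xi_0+\sqrt{1-t^2}\varphi_0,\,t\sqrt{1-|c|^2}\xi_0)$ presents $\nu_t$ for $t\in\,]0,1[$ as a proper convex combination of the pure state from the first copy and the pure state $\varrho$ from the second, hence mixed; thus only $|c|=1$ can possibly yield pure-state arcs. The delicate step is then to justify that $c$ with $|c|=1$ really is a complete invariant of pure-state arcs, i.e.\ that any $\gamma\in \mathcal C^{\nu,\varrho}(M)$ with $|\gamma|\subset \mathcal S^{\mathsf{ex}}(M)$ admits a standardised implementation in $\pi_0$; this I would obtain by combining pointwise implementability of each inner $\nu_t$ in $\pi_0$ (since $F(M|\nu,\nu_t)=\sqrt{1-t^2}>0$ by the length formulae of Example~\ref{ex2} forces $\nu_t$ non-orthogonal to $\nu$, hence equivalent to $\nu$ for pure states) with a polar-decomposition / uniqueness argument along the lines of Lemma~\ref{repf} to reduce the representing scalar in $\pi_0(M)^{\,\prime}$ to the form $K=e^{{\mathsf i}\theta}{\mathsf 1}$.
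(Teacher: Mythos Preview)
Your argument is correct and follows a more direct route than the paper's. The paper first reduces to an irreducible action on a subspace $\mathcal K$ and then inflates to the reducible representation on ${\mathsf{H.S.}}(\mathcal K)$ so as to build both the mixed arc (via the rank-two implementer $\phi_t=tq_\xi+\sqrt{1-t^2}\,q_\varphi$) and the continuum of pure arcs (via $\phi_t^\alpha=te^{{\mathsf i}\alpha}q_\xi w+\sqrt{1-t^2}\,q_\varphi$ with $w^*q_\xi w=q_\varphi$, the rank-one property of $\phi_t^\alpha(\phi_t^\alpha)^*$ being verified by hand) in one and the same representation; you instead stay in your irreducible $\pi_0$ for the pure arcs and switch to the ad hoc direct sum $\pi_0\oplus\pi_0$ for the mixed one. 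Your approach is lighter: pureness of the $\gamma_\theta$ is immediate from irreducibility, and the mixed arc drops out of the direct-sum decomposition without any Hilbert--Schmidt machinery. You also make explicit something the paper leaves tacit, namely the upper bound $\le\mathfrak c$: the identification ${\mathfrak F}(M|\nu,\varrho)=\{cg:|c|\le 1\}$ via Lemma~\ref{repf} applied in $\pi_0$ (where the commutant is scalar), together with the injectivity of $c\mapsto\re(cg)$, bounds the totality of geodesic arcs by the closed disc; your further observation that $|c|<1$ forces mixed interior states, realised concretely in $\pi_0\oplus\pi_0$, is correct and pins down exactly which arcs are pure. Two minor points: in the injectivity argument the intended conclusion is $\re(e^{{\mathsf i}\alpha}g)\not\equiv 0$ rather than $\im$; and the ``delicate step'' you flag at the end is not needed---once every arc is tagged by its $\re f^\pi_{\zeta,\varphi}$ via \eqref{einga1}, and $f^\pi_{\zeta,\varphi}\in{\mathfrak F}(M|\nu,\varrho)\subset\{cg:|c|\le 1\}$ holds regardless of which $\pi$ the arc was built in, no further standardisation inside $\pi_0$ is required.
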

\begin{proof} Since both states are mutually orthogonal pure states, we conclude from  Lemma \ref{diffreteile} that $\varrho_\nu=\varrho$ and $\nu_\varrho=\nu$. Hence, since $\nu$ and $\varrho$ are not mutually disjoint, ${\mathcal{C}}^{\nu,\varrho}(M)$ has to be degenerated by Theorem \ref{einga}, that is, $\gamma\in  {\mathcal{C}}^{\nu,\varrho}(M)$ is not uniquely determined.

Let $\{\pi,{\mathcal{H}}_\pi\}$ be any  unital $^*$-representation of $M$ with non-trivial $\pi$-fibres of $\nu$ and $\varrho$, and suppose $\xi\in {\mathcal{S}}_{\pi,M}(\varrho)$ and $\varphi\in  {\mathcal{S}}_{\pi,M}(\nu)$ to be  chosen as to obey  $\langle\xi,\varphi\rangle_\pi=F(M|\nu,\varrho)$. Let $N=\pi(M)^{\,\prime\prime}$. As we know, over $N^{\,\prime}$ this means that   $$h^\pi_{\xi,\varphi}\geq {\mathsf{0}}$$
Note that, according to Corollary \ref{subadd}\,\eqref{subadd2}, orthogonality implies $F(M|\nu,\varrho)=0$.  From the previous $h^\pi_{\xi,\varphi}= {\mathsf{0}}$ follows. Hence $\langle z\xi,y\varphi\rangle_\pi=0$, for all $y,z\in N^{\,\prime}$, that is
\begin{subequations}\label{garc2}
\begin{equation}\label{garc2a}
p_\pi(\xi)\perp p_\pi(\varphi)
\end{equation}
is inferred to hold.
Remark that owing to $p_\pi^{\,\prime}(\varphi){\mathcal H}_\pi=[\pi(M)\varphi]$ and since the state implemented by $\varphi$ is pure, the action of $\pi$ in restriction to $[\pi(M)\varphi]$ has to be an irreducible one. Thus, on the one hand,
\begin{equation}\label{garc2b}
N p_\pi^{\,\prime}(\varphi)={\mathsf{B}}\left( p_\pi^{\,\prime}(\varphi){\mathcal H}_\pi\right)
\end{equation}
has to be fulfilled, with ${\mathsf{dim}}\, p_\pi^{\,\prime}(\varphi){\mathcal H}_\pi > 1$ since owing to $\nu\in {\mathcal{S}}^{{\mathsf{ex}}}(M)$ the state $\nu$ cannot be a character state of $M$.
On the other hand, since  $p_\pi^{\,\prime}(\varphi)[\pi(M)\xi]\subset [\pi(M)\varphi]$  is a closed subspace left invariant under the action of $\pi$,  either $p_\pi^{\,\prime}(\varphi)[\pi(M)\xi]=\{{\mathsf{0}}\}$ or $p_\pi^{\,\prime}(\varphi)[\pi(M)\xi]= [\pi(M)\varphi]$ must hold. Thus either $p_\pi^{\,\prime}(\varphi)\perp p_\pi^{\,\prime}(\xi)$ or $p_\pi^{\,\prime}(\varphi)\leq p_\pi^{\,\prime}(\xi)$ holds. Since $\varrho$ is pure too, the same conclusions remain true if the r\^{o}les of $\varrho, \xi$ and $\nu, \varphi$ are interchanged. Hence, the conclusion is that either $p_\pi^{\,\prime}(\varphi)= p_\pi^{\,\prime}(\xi)$  or $p_\pi^{\,\prime}(\varphi)\perp p_\pi^{\,\prime}(\xi)$ has to hold.
For the moment, let us presume  $p_\pi^{\,\prime}(\varphi)\perp p_\pi^{\,\prime}(\xi)$ were fulfilled along with each possible choice of $\{\pi,\xi,\varphi\}$ as above. This then implied that, to each $\gamma\in  {\mathcal{C}}^{\nu,\varrho}(M)$, there had to exist an implementation
\[
[0,1]\ni t\longmapsto \varphi_t=t\xi+\sqrt{1-t^2}\varphi
\]
relative to $\pi$, for some particular choice of $\{\pi,\xi,\varphi\}$. In view of  $p_\pi^{\,\prime}(\varphi)\perp p_\pi^{\,\prime}(\xi)$ we had $\langle \pi(x)\xi,\varphi\rangle_\pi=0$, for all $x\in M$. Hence, the states $\nu_t$ implemented by $\varphi_t$ read
$$
\nu_t=t^2\varrho+(1-t^2)\,\nu
$$
and obviously were the same irrespective of the special choice of $\{\pi,\xi,\varphi\}$ met. This contradicted the fact mentioned at the beginning of the proof and saying that  ${\mathcal{C}}^{\nu,\varrho}(M)$ has to be  degenerated. Hence,  $\{\pi,\xi,\varphi\}$ always can be chosen as to obey \begin{equation}\label{garc2c}
p_\pi^{\,\prime}(\varphi)= p_\pi^{\,\prime}(\xi)
\end{equation}
\end{subequations}
Start with such case of  $\{\pi,\xi,\varphi\}$ now. In line with this, consider the Hilbert subspace ${\mathcal{K}}=p_\pi^{\,\prime}(\varphi){\mathcal H}_\pi\subset {\mathcal H}_\pi$. Then, in view of \eqref{garc2c} we have $\xi,\varphi\in \mathcal{K}$, with $q_\xi=p_\pi(\xi)p_\pi^{\,\prime}(\varphi)$ and $q_\varphi=p_\pi(\varphi)p_\pi^{\,\prime}(\varphi)$ being the minimal orthoprojections in \eqref{garc2b} projecting onto the rays spanned by $\xi$ and $\varphi$, respectively. Clearly, $q_\xi$ and $q_\varphi$ are mutually orthogonal, by \eqref{garc2a}.
In addition, let $\{\pi_L,{\mathsf{H.S.}}(\mathcal{K})\}$ be the unital $^*$-representation of ${\mathsf{B}}(\mathcal{K})$ arising by the action of elements of ${\mathsf{B}}(\mathcal{K})$ as left multiplication operators onto the Hilbert-Schmidt operators  ${\mathsf{H.S.}}(\mathcal{K})$ over $\mathcal{K}$, with $\langle x, y\rangle_{\mathsf{H.S}}={{\mathsf{tr}}}\, x y^*$ as scalar product, for $x,y\in {\mathsf{H.S.}}(\mathcal{K})$. Note that by
$$\pi_0: M\ni x\longmapsto\pi_L(\pi(x)p_\pi^{\,\prime}(\varphi))$$
another unital $^*$-representation $\{\pi_0,{\mathsf{H.S.}}(\mathcal{K})\}$ of $M$ is given. Then, due to \eqref{garc2b} and by  definition of $\pi_0$ with the help  of $\pi_L$, elements in the commutant  $\pi_0(M)^{\,\prime}$ in their action on ${\mathsf{H.S.}}(\mathcal{K})$ can be realized through right multiplications by bounded linear operators over $\mathcal{K}$.  Accordingly, since $q_\xi\in {\mathcal{S}}_{\pi_0,M}(\varrho)$ and $q_\varphi\in {\mathcal{S}}_{\pi_0,M}(\nu)$ are  fulfilled, we see that $$h^{\pi_0}_{q_\xi,q_\varphi}={\mathsf{0}}$$ holds over $\pi_0(M)^{\,\prime}$. Thus especially, in respect of $\pi_0$, by the family
$$[0,1]\ni t\longmapsto \phi_t=t\,q_\xi+\sqrt{1-t^2}\,q_\varphi\in {\mathsf{H.S.}}(\mathcal{K})$$
states $\nu_t$ of a geodesic arc  $\gamma\in{\mathcal{C}}^{\nu,\varrho}(M)$ are  implemented, which at $x\in M$ read as
\begin{eqnarray*}
\nu_t(x) &=& \langle \pi_0(x)\phi_t,\phi_t\rangle_{\mathsf{H.S}}={{\mathsf{tr}}}\,(\pi(x)p_\pi^{\,\prime}(\varphi)\phi_t) \phi_t^*={{\mathsf{tr}}}\,(\pi(x)p_\pi^{\,\prime}(\varphi))(\phi_t \phi_t^*)\\
&=& t^2\, {{\mathsf{tr}}}\,(\pi(x)p_\pi^{\,\prime}(\varphi)) q_\xi+(1-t^2)\, {{\mathsf{tr}}}\,(\pi(x)p_\pi^{\,\prime}(\varphi)) q_\varphi\\
& =& t^2\,\langle \pi(x)\xi,\xi\rangle_\pi+(1-t^2)\,\langle \pi(x)\varphi,\varphi\rangle_\pi= t^2\varrho(x)+(1-t^2)\,\nu(x)
\end{eqnarray*}
Hence, the geodesic arc $\gamma$ implemented this way is obeying $|\gamma|\cap {\mathcal{S}}^{{\mathsf{ex}}}(M)=\{\nu,\varrho\}$.

More generally, $q_\xi u\in {\mathcal{S}}_{\pi_0,M}(\varrho)$ and  $h^{\pi_0}_{q_\xi u,q_\varphi}={\mathsf{0}}$ hold over $\pi_0(M)^{\,\prime}$, for each unitary $u\in {\mathsf{B}}(\mathcal{K})$. Let us fix a   unitary $w$ such that $w^* q_\xi w=q_\varphi$. Then, for $u=w \exp{\mathsf{i}\alpha}$, with $\alpha\in [0,2\pmb{\pi}[$, we have $u^* q_\xi u=q_\varphi$, too. Accordingly, in contrast to the above now the geodesic arc $\gamma_\alpha\in{\mathcal{C}}^{\nu,\varrho}(M)$ will be considered which relative to $\pi_0$ is implemented by
$$[0,1]\ni t\longmapsto \phi^\alpha_t=t\,\exp{\mathsf{i}\alpha}\, q_\xi w+\sqrt{1-t^2}\,q_\varphi\in {\mathsf{H.S.}}(\mathcal{K})$$
The state $\nu^\alpha_t$ implemented by $\phi^\alpha_t$ with help of the trace ${{\mathsf{tr}}}$ over $\mathcal K$ at $x\in M$  reads
\begin{equation}\label{garc3}
\nu^\alpha_t(x)={{\mathsf{tr}}}\,(\pi(x)p_\pi^{\,\prime}(\varphi))(\phi^\alpha_t {\phi^\alpha_t}^*)
\end{equation}
with the density operator $\phi^\alpha_t {\phi^\alpha_t}^*$ given over $\mathcal{K}$ by
\begin{equation*}
\phi^\alpha_t {\phi^\alpha_t}^*=t^2\,q_\xi +t\sqrt{1-t^2} \left(\exp{\mathsf{i}\alpha}\, q_\varphi w^* q_\xi+ \exp{(\mathsf{-i}\alpha)}\, q_\xi w q_\varphi\right) +(1-t^2)\,q_\varphi
\end{equation*}
In using that $ q_\xi w=w q_\varphi$, $w^* q_\xi= q_\varphi w^*$  and $q_\xi q_\varphi={\mathsf{0}}$, $q_\varphi q_\xi={\mathsf{0}}$ one easily sees that $$(\phi^\alpha_t {\phi^\alpha_t}^*)^2=\phi^\alpha_t {\phi^\alpha_t}^*$$
holds, for each $\alpha$. That is,  $\phi^\alpha_t {\phi^\alpha_t}^*\in {\mathsf{B}}(\mathcal{K})$ is an orthoprojection of rank one, for each $t$ and $\alpha$. By \eqref{garc3} from this in view of \eqref{garc2b} with ${\mathsf{dim}}\,{\mathcal{K}}>1$ it follows that $\nu_t^\alpha$ is a pure but  non-character state. Hence $|\gamma_\alpha|\subset {\mathcal{S}}^{{\mathsf{ex}}}(M)$, for each $\alpha$.

Finally, for $x=q_\varphi w^* q_\xi$  we have $x^*x=q_\xi$ and $x x^*=q_\varphi$. Thus especially
\begin{equation}\label{garc4}
x^* x\not=x x^*
\end{equation}
is fulfilled. On the other hand, in presuming $\phi^\alpha_t {\phi^\alpha_t}^*=\phi^\beta_t {\phi^\beta_t}^*$ for some reals $\alpha, \beta$ obeying $\exp{\mathsf{i}\alpha}\not=\exp{\mathsf{i}\beta}$, we had to conclude that
\[
x^*=\dfrac{\exp{\mathsf{i}\beta}-\exp{\mathsf{i}\alpha}}{\exp{(-\mathsf{i}\alpha)}-\exp{(-\mathsf{i}\beta)}}\,x
\]
which relation however is contradicting \eqref{garc4} manifestly. Thus   $\phi^\alpha_t {\phi^\alpha_t}^*\not=\phi^\beta_t {\phi^\beta_t}^*$  whenever  $\exp{\mathsf{i}\alpha}\not=\exp{\mathsf{i}\beta}$. From this the validity of the last assertion follows.
\end{proof}
\begin{corolla}\label{pureconv}
Let $M$ be a non-commutative, unital ${\mathsf{C}}^*$-algebra. Then, ${\mathcal{S}}^{{\mathsf{ex}}}(M)$ is convex in   respect of the Bures metric if, and only if, mutually disjoint states in ${\mathcal{S}}^{{\mathsf{ex}}}(M)$ do not exist.
\end{corolla}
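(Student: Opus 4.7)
The plan is to verify Definition \ref{geoconvex}\,\eqref{geoconvex1} for $\Omega={\mathcal{S}}^{{\mathsf{ex}}}(M)$ by a complete case analysis of pairs of distinct pure, non-character states $\nu,\varrho$, using Lemmas \ref{geoextarc}, \ref{geoextarc1} and \ref{geoextarc2}, which between them cover all three possibilities: (i) $\varrho\not\perp\nu$, (ii) $\nu\perp\varrho$ with $\nu,\varrho$ mutually disjoint, and (iii) $\nu\perp\varrho$ but not mutually disjoint. Convexity of ${\mathcal{S}}^{{\mathsf{ex}}}(M)$ amounts to the assertion that in each of these cases at least one geodesic arc $\gamma\in{\mathcal{C}}^{\nu,\varrho}(M)$ with $|\gamma|\subset {\mathcal{S}}^{{\mathsf{ex}}}(M)$ can be found.

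For sufficiency, assume that no two states in ${\mathcal{S}}^{{\mathsf{ex}}}(M)$ are mutually disjoint; then only cases (i) and (iii) can occur. In case (i), Lemma \ref{geoextarc} produces a (unique) $\gamma\in {\mathcal{C}}^{\nu,\varrho}(M)$ obeying $|\gamma|\subset {\mathcal{S}}^{{\mathsf{ex}}}(M)$. In case (iii), Lemma \ref{geoextarc2} guarantees the existence of $\gamma\in {\mathcal{C}}^{\nu,\varrho}(M)$ with $|\gamma|\subset {\mathcal{S}}^{{\mathsf{ex}}}(M)$ (in fact a whole continuum of such arcs). In either situation condition \eqref{geoconvex1} of Definition \ref{geoconvex} is met, so ${\mathcal{S}}^{{\mathsf{ex}}}(M)$ is convex under the Bures metric.

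For necessity, suppose there exist $\nu,\varrho\in {\mathcal{S}}^{{\mathsf{ex}}}(M)$ which are mutually disjoint. Then Lemma \ref{geoextarc1} applies and tells us that ${\mathcal{C}}^{\nu,\varrho}(M)$ consists of exactly one geodesic arc $\gamma$, and that this unique arc satisfies $|\gamma|\cap {\mathcal{S}}^{{\mathsf{ex}}}(M)=\{\nu,\varrho\}$. In particular, no geodesic arc connecting $\nu$ to $\varrho$ evolves entirely in ${\mathcal{S}}^{{\mathsf{ex}}}(M)$, which violates Definition \ref{geoconvex}\,\eqref{geoconvex1}. Hence ${\mathcal{S}}^{{\mathsf{ex}}}(M)$ fails to be convex.

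The substantive content has already been carried by Lemmas \ref{geoextarc}--\ref{geoextarc2}; the only point to be careful about is the exhaustiveness of the trichotomy and the observation that mutual disjointness of two pure states forces the geodesic arc to be unique with only the endpoints lying in ${\mathcal{S}}^{{\mathsf{ex}}}(M)$, so that the obstruction to convexity is sharply localised at exactly the pairs excluded in the statement.
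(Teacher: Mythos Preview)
Your proof is correct and follows essentially the same route as the paper: both argue sufficiency via the case split handled by Lemmas \ref{geoextarc} and \ref{geoextarc2}, and necessity via Lemma \ref{geoextarc1}, with the paper phrasing the latter as the contrapositive while you argue it directly. Your explicit emphasis on the uniqueness of the geodesic arc in the disjoint case is a useful clarification, but otherwise the two arguments coincide.
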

\begin{proof}
Suppose $\nu,\varrho\in {\mathcal{S}}^{{\mathsf{ex}}}(M)$ with $\nu\not=\varrho$. Under the assumption that mutually disjoint states in ${\mathcal{S}}^{{\mathsf{ex}}}(M)$ do not exist, Lemma \ref{geoextarc} and Lemma \ref{geoextarc2} assure that $\gamma\in {\mathcal{C}}^{\nu,\varrho}(M)$ with $|\gamma|\subset {\mathcal{S}}^{{\mathsf{ex}}}(M)$ exists, in any case. On the other hand, if ${\mathcal{S}}^{{\mathsf{ex}}}(M)$ is supposed to be convex, then according to Definition \ref{geoconvex}\,\eqref{geoconvex1} for each two states  $\nu,\varrho\in {\mathcal{S}}^{{\mathsf{ex}}}(M)$ there is $\gamma\in {\mathcal{C}}^{\nu,\varrho}(M)$ obeying  $|\gamma|\subset {\mathcal{S}}^{{\mathsf{ex}}}(M)$. Hence, the assumption under which the assertion of Lemma \ref{geoextarc1} applies must be violated, for each choice of $\nu,\varrho\in {\mathcal{S}}^{{\mathsf{ex}}}(M)$. Thus, mutually disjoint states cannot occur in ${\mathcal{S}}^{{\mathsf{ex}}}(M)$.
\end{proof}
\begin{remark}\label{fsmetrik}
\begin{enumerate}
\item \label{fsmetrik2}
For $M={\mathsf{B}}({\mathcal{H}})$, with ${\mathsf{dim}}\,{\mathcal{H}}\geq 2$, ${\mathcal{H}}$ separable,  one has
\begin{eqnarray*}
{\mathcal{S}}^{{\mathsf{ex}}}_0({\mathsf{B}}({\mathcal{H}}))&=&\bigl\{\langle(\cdot)\varphi,\varphi\rangle:\, \varphi\in {\mathcal{H}},\,\|\varphi\|=1\bigr\}\\ &=&{\mathsf{ex}}\,{\mathcal{S}}({\mathsf{B}}({\mathcal{H}}))\cap {\mathcal{S}}_0({\mathsf{B}}({\mathcal{H}}))
\end{eqnarray*}
and which in addition under the Bures metric  is a convex subset of   ${\mathcal{S}}_0({\mathsf{B}}({\mathcal{H}}))$, cf.  Definition \ref{geoconvex}\,\eqref{geoconvex1}. More precisely, for $\nu, \varrho\in {\mathcal{S}}^{{\mathsf{ex}}}_0({\mathsf{B}}({\mathcal{H}}))$ consider the set  $${\mathcal{C}}^{\nu,\varrho}({\mathsf{B}}({\mathcal{H}}))^{{\mathsf{ex}}}=\bigl\{\gamma\in {\mathcal{C}}^{\nu,\varrho}({\mathsf{B}}({\mathcal{H}})): |\gamma|\subset {\mathcal{S}}^{{\mathsf{ex}}}_0({\mathsf{B}}({\mathcal{H}}))\bigr\}$$
of those geodesic arcs connecting $\nu$ and $\varrho$ within the set in question. Then
$$
\# \,{\mathcal{C}}^{\nu,\varrho}({\mathsf{B}}({\mathcal{H}}))^{{\mathsf{ex}}}=
\begin{cases}
\ \mathbf{1} & \text{ for $\nu\not\perp\varrho$;}\\ \\
\ \mathbf{c} &  \text{ for $\nu\perp\varrho$.}
\end{cases}
$$
with $\mathbf{c}$ to stand for the cardinality of the real continuum. Especially this proves that  ${\mathcal{S}}^{{\mathsf{ex}}}_0({\mathsf{B}}({\mathcal{H}}))$ is not a geodesically convex subset.
\item \label{fsmetrik3}
Suppose ${\mathcal{H}}$ to be finite dimensional,  let say  ${\mathsf{dim}}\,{\mathcal{H}}=n+1$, with  $n\in {\mathbb{N}}$. Then, in view of \eqref{fsmetrik1} and \eqref{fsmetrik2}
 the Bures metric is distinguished as a  Riemannian metric on the above set which as such has to be unique since it proves equivalent to the Fubini-Study metric \cite{Fubi:04,Stud:05} on the  complex projective space ${\mathbb{C}}\mathsf{P}^n$  of unit rays over ${\mathbb{C}}^{n+1}\simeq {\mathcal{H}}$, with respect to the standard scalar product on ${\mathbb{C}}^{n+1}$.
\end{enumerate}
\end{remark}

\subsubsection{Processing density operators of full support \textup{(}generalities\textup{)}}\label{finsstat}
Throughout  this paragraph  ${\mathcal S}^{\mathsf{faithful}}_0({\mathsf B}({\mathcal H}))$ will be identified with the affinely convex set of all density operators of full support, and then for simplicity instead of using the abbreviation $\sigma_\varrho$ for the  density matrix of full support which is uniquely  corresponding to the faithful normal state $\varrho$, subsequently simply the same abbreviating notation $\varrho$ will be used for both, the normal positive linear form and the density operator, too (this slightly differs from the usage in \ref{beimatrix}). Thus especially, if not said otherwise, problems relating a curve in ${\mathcal S}^{\mathsf{faithful}}_0({\mathsf B}({\mathcal H}))$ will be discussed in terms of the curve
\begin{equation}\label{hcurve}
  \gamma:  I\ni t\longmapsto \varrho_t\in {\mathcal S}^{\mathsf{faithful}}_0({\mathsf B}({\mathcal H}))
\end{equation}
 of the corresponding density operators $\varrho_t$ of full support.
Having in mind the latter, in view of the above and according to \eqref{finsler1a} we arrive at the following example.
\begin{example}\label{sepa}
With the Bures metric on the (affinely) convex set ${\mathcal S}_0^{\mathsf{faithful}}({\mathsf B}({\mathcal H}))$, for $\gamma$ as in \eqref{hcurve} the following is fulfilled, at each $s\in I$\,:
\begin{equation}\label{hcurve2}
       \lim_{t\to s}
\frac{d_B(\varrho_{t},\varrho_s)}{|t-s|}=\|\varrho^{\,\prime}_s\|_{\varrho_s}=\|{{{\mathsf{tr}}} }\,\varrho^{\,\prime}_s(\cdot)\|_{\varrho_s}
    \end{equation}

\end{example}
\begin{remark}\label{implHS}
\begin{enumerate}
\item\label{impHS1}
In particular,  each curve $\gamma$  possessing a differentiable local implementation as Hilbert-Schmidt operators numbers among the class of curves  \eqref{hcurve} admitted in this paragraph. That is, if each  $\varrho_t$ is generated as $\varrho_t=c_tc_t^*$ with the help of a  differentiable implementation map
\begin{equation}\label{hcurve1}
   I\ni t\longmapsto c_t\in {\mathsf{ H.S.}}({\mathcal H})
\end{equation}
 into the space of Hilbert-Schmidt operators. In fact, under this supposition the special representation $\pi$ of ${\mathsf B}({\mathcal H})$ as left multiplication operators on the Hilbert space ${\mathcal H}_\pi={\mathsf{ H.S.}}({\mathcal H})$ considered in paragraph \ref{beimatrix}, see especially eqs.~\eqref{hs0}, amounts to be a $\gamma$-compliant $^*$-representation of \eqref{hcurve}.
 \item\label{impHS2}
 It is only true for  ${{\mathsf{dim}}}\,{\mathcal H}<\infty$ that the class of curves admitted exactly equals the class of all $C^1$-curves evolving in ${\mathcal S}_0^{\mathsf{faithful}}({\mathsf B}({\mathcal H}))$, since then \begin{equation}\label{hcurve1b}
 c_t=\sqrt{\varrho_t}
 \end{equation}
 is differentiable and can be chosen in  \eqref{hcurve1}. A proof of this, together with an example showing that differentiability of \eqref{hcurve1b} within the space ${\mathsf{ H.S.}}({\mathcal H})$ for a $C^1$-curve  \eqref{hcurve} can fail in case of   ${{\mathsf{dim}}}\,{\mathcal H}=\infty$, is enclosed in Appendix \ref{app_d}, see Theorem \ref{enddiffex} and Example \ref{gegendiff}.
\end{enumerate}
\end{remark}
As to the applicability of the example, besides the formula \eqref{tangentnorm} it can be  enlightening to have an evaluation of the expression $\|\varrho^{\,\prime}_s\|_{\varrho_s}$ on the right-hand side of \eqref{hcurve2} and reading more explicitly in terms of the trace class operators $\varrho_s^{\,\prime}$ and $\varrho_s$. To this sake, let us fix $s\in I$, and remind that in view of Example \ref{furthex} by suppositions about $\gamma$ we have
\begin{subequations}\label{finT0}
 \begin{equation}\label{finT}
 f(\cdot)={{{\mathsf{tr}}} }\,\varrho^{\,\prime}_s(\cdot)\in {\mathsf{T}}_{\varrho_s}{\mathcal S}_0^{\mathsf{faithful}}({\mathsf B}({\mathcal H})) ={\mathsf{T}}_{\varrho_s}({\mathsf B}({\mathcal H}))
 \end{equation}
Remind that, provided \eqref{finT} is fulfilled, according to Corollary \ref{tanfolg} the value   $$\|f\|_{\varrho_s}=\|\varrho^{\,\prime}_s\|_{\varrho_s}$$ can be evaluated within each $^*$-representation $\pi_0$ with respect to which $\varrho_s$ is implementable as a vector. Since only normal linear forms will be concerned with in context of $\varrho_s$ and $\varrho_s^{\,\prime}$, this will be the case if for $\pi_0$ the special $^*$-representation $\pi_0=\pi$ arising by left-multiplication action of elements of ${\mathsf B}({\mathcal H})$ on Hilbert-Schmidt operators ${\mathcal H}_\pi={\mathsf{H.S.}}({\mathcal H})$ is considered. For the details see  \eqref{hs0}. By $\pi$ a ${\mathsf W}^*$-isomorphism is established, and ${\mathsf B}({\mathcal H})$ may be identified with the ${\mathsf W}^*$-algebra  $N=\pi({\mathsf B}({\mathcal H}))$ of all bounded linear operators acting by multiplication from the left
 on Hilbert-Schmidt operators over ${\mathcal H}$. For $g\in {\mathsf B}({\mathcal H})^*$ let $g_\pi=g\circ \pi^{-1}$ be the unique linear form in $N^*$ obeying $g=g_\pi\circ\pi$. Then,  for $f$ from above  $f_\pi\in{\mathsf{T}}_{(\varrho_s)_\pi}(N)$
is fulfilled,
 with the vector state $(\varrho_s)_\pi$ over $N$ implemented by $$\sqrt{\varrho_s}\in {\mathcal S}_{\pi,{\mathsf B}({\mathcal H})}(\varrho_s)$$ Accordingly, Corollary \ref{vNT} can be applied with the result that
\[
f_\pi(\cdot)=\bigl\langle(\cdot)\hat{\psi}_0,\sqrt{\varrho_s}\bigr\rangle_{\mathsf {H.S.}}+\bigl\langle(\cdot)\sqrt{\varrho_s},\hat{\psi}_0\bigr\rangle_{\mathsf {H.S.}}
\]
has to be fulfilled over $N$,  with a unique  Hilbert-Schmidt operator  $\hat{\psi}_0$ obeying
$$\hat{\psi}_0\in \bigl[{\mathsf B}({\mathcal H})_{\mathsf h}\sqrt{\varrho_s}\bigr]_{{\mathsf{H.S.}}}$$
 From this, for any $x\in {\mathsf B}({\mathcal H})$ at the argument $\pi(x)\in N$, the conclusion is that
 \begin{equation*}
 f(\cdot)={{{\mathsf{tr}}} }\,\varrho^{\,\prime}_s(\cdot)= {{{\mathsf{tr}}}}\,(\cdot)\hat{\psi}_0 \sqrt{\varrho_s}+ {{{\mathsf{tr}}}}\,(\cdot) \sqrt{\varrho_s}\hat{\psi}_0^*
 \end{equation*}
has to hold, over ${\mathsf B}({\mathcal H})$. This is equivalent to the operator identity
\begin{equation}\label{u9a0}
\varrho^{\,\prime}_s= \hat{\psi}_0 \sqrt{\varrho_s}+  \sqrt{\varrho_s}\,\hat{\psi}_0^*
 \end{equation}
which has to be satisfied with uniquely determined $\hat{\psi}_0\in \bigl[{\mathsf B}({\mathcal H})_{\mathsf h}\sqrt{\varrho_s}\bigr]_{{\mathsf{H.S.}}}$, and in respect of which according to Corollary \ref{vNT} then \eqref{hcurve2} may be extended by
 \begin{equation}\label{u9a}
    \|\varrho^{\,\prime}_s\|_{\varrho_s}=\|\hat{\psi}_0\|_{\mathsf{ H.S.}}=\sqrt{{{{\mathsf{tr}}}}\,\hat{\psi}_0\hat{\psi}_0^*}
 \end{equation}
\end{subequations}
Provided $\hat{\psi}_0$ can be evaluated in terms of $\varrho_s^{\,\prime}$ and $\varrho_s$, this then on the basis of \eqref{u9a} will yield the desired result. This we are going to do now, for ${{\mathsf{dim}}}\,{\mathcal H}=\infty$.

To prepare for the details, let  $\{\varphi_k:k\in{{\mathbb{N}}}\}$ be  a complete orthonormal system of eigenvectors of $\varrho_s$ such that $\varrho_s\,\varphi_k=\lambda_k\varphi_k$, with the completely ordered sequence $$\lambda_1\geq \lambda_2\geq \lambda_3 \geq\cdots \geq \lambda_n\geq \cdots >0$$ of eigenvalues (with each eigenvalue repeated according to its multiplicity), and be $\{e_k\}$ the system of mutually orthogonal one-dimensional orthoprojections with $e_k\,\varphi_k=\varphi_k$,  and summing up to unity, $\sum_k e_k={\mathsf 1}$. Also, for each $n\in {{\mathbb{N}}}$,
$$E_n=\sum_{k\leq n} e_k$$
be the orthoprojection projecting onto the linear subspace spanned by $\{\varphi_1,\varphi_2,\ldots,\varphi_n\}$. Due to the finite dimensionality of $E_n$, and since $E_n\sqrt{\varrho_s}=\sqrt{\varrho_s}E_n$ holds, the fact
\[
\hat{\psi}_0\in \bigl[{\mathsf B}({\mathcal H})_{\mathsf h}\sqrt{\varrho_s}\,\bigr]_{{\mathsf{H.S.}}}
\]
implies that, with uniquely determined hermitian $x_n\in E_n{\mathsf B}({\mathcal H})E_n$,
\begin{subequations}\label{nE0}
\begin{equation}\label{nE}
E_n \hat{\psi}_0 E_n=x_n\sqrt{\varrho_s}
\end{equation}
has to be fulfilled.  Owing to ${\mathsf{l.u.b.}}\, \{E_n\}={\mathsf 1}$ the following relation can be followed:
\begin{equation}\label{nE2}
 \hat{\psi}_0={\mathsf{ H.S.}}-\lim_{n\to\infty} x_n\sqrt{\varrho_s}
\end{equation}
with hermitian finite-rank operators $x_n$ which in view of the operator identity \eqref{u9a0} at each $n\in {{\mathbb{N}}}$ are the unique hermitian solutions in $E_n{\mathsf B}({\mathcal H})E_n$ of
\begin{equation}\label{nE1}
E_n\varrho^{\,\prime}_s E_n=x_n \varrho_s+\varrho_s x_n
\end{equation}
\end{subequations}
According to \cite[\sc{Theorem 4.6}]{Rose:56} and \cite{Hein:51} the solution of \eqref{nE1} reads
\begin{subequations}\label{loes}
\begin{eqnarray}\label{loes1}
  x_n &=& \sum_{j,k\leq n}\frac{e_j\,\varrho^{\,\prime}_s\, e_k}{\lambda_j+\lambda_k} \\
  \label{loes2}
   &=& \int_0^\infty \exp{\bigl(-t\,\varrho_s\bigr)} \,E_n\varrho^{\,\prime}_s E_n\, \exp{\bigl(-t\,\varrho_s\bigr)}\, d\/t
\end{eqnarray}
\end{subequations}
but which formula is subject to easy proof by direct justification, too. Especially, from \eqref{loes1} in view of \eqref{nE2} the following useful  expression for $\hat{\psi}_0$ can be inferred
\begin{equation}\label{filtern2}
  \hat{\psi}_0=\sum_{j,k\in {{\mathbb{N}}}} \frac{\sqrt{\lambda_k}}{\lambda_j+\lambda_k}\,e_j\,\varrho^{\,\prime}_s\,e_k={\mathsf{ H.S.}}-\lim_{n\to\infty} \sum_{j,k\leq n} \frac{\sqrt{\lambda_k}}{\lambda_j+\lambda_k}\,e_j\,\varrho^{\,\prime}_s\,e_k
\end{equation}
with the series convergeing in the Hilbert-Schmidt sense as mentioned. Taking together Example \ref{sepa}, formula  \eqref{hcurve2} and \eqref{u9a}, in view of \eqref{loes} and  \eqref{filtern2} the following results are seen to hold, at each $s\in I$ of an admitted curve \eqref{hcurve}.
\begin{theorem}\label{infend0}
Under the suppositions of the example the following hold:
\begin{subequations}\label{infend}
\begin{eqnarray}\label{infend1}
\biggl(\lim_{t\to s}\frac{d_B(\varrho_{t},\varrho_s)}{|t-s|}\biggr)^2 & =& \lim_{n\to\infty} {{\mathsf{tr}}}\,\varrho_s x_n^2\\
\label{infend2}
&=& \sum_{j,k\in {{\mathbb{N}}}} \frac{\lambda_k}{(\lambda_j+\lambda_k)^2}\,|\langle\varrho^{\,\prime}_s\varphi_k,\varphi_j\rangle|^2
\end{eqnarray}
\end{subequations}
where $x_n$ is the unique hermitian solution \eqref{loes} of \eqref{nE1}.
\end{theorem}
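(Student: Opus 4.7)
The assertion is essentially a bookkeeping exercise chaining together the identifications already carried out in \eqref{finT0}--\eqref{filtern2}. The plan is to proceed in three short steps.

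First I would invoke Example \ref{sepa}, formula \eqref{hcurve2}, to rewrite the left-hand side of \eqref{infend1} as $\|\varrho^{\,\prime}_s\|_{\varrho_s}^2$, and then use \eqref{u9a} to identify this with ${{\mathsf{tr}}}\,\hat{\psi}_0\hat{\psi}_0^*$, where $\hat{\psi}_0 \in \bigl[{\mathsf B}({\mathcal H})_{\mathsf h}\sqrt{\varrho_s}\,\bigr]_{{\mathsf{H.S.}}}$ is the unique solution of the operator equation \eqref{u9a0}. At this point the asserted limit is already recognized as $\|\hat{\psi}_0\|_{{\mathsf{H.S.}}}^2$, and the remainder of the proof consists in rewriting the latter in the two announced forms.

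Next I would exploit the approximating sequence furnished by \eqref{nE2}, namely $\hat{\psi}_0 = {\mathsf{H.S.}}-\lim_{n\to\infty} x_n\sqrt{\varrho_s}$, with $x_n$ the hermitian finite-rank solution of \eqref{nE1}. Continuity of the Hilbert-Schmidt norm together with cyclicity of the trace and hermiticity of $x_n$ gives
\[
\|\hat{\psi}_0\|_{{\mathsf{H.S.}}}^2 \;=\; \lim_{n\to\infty}\|x_n\sqrt{\varrho_s}\|_{{\mathsf{H.S.}}}^2 \;=\; \lim_{n\to\infty}{{\mathsf{tr}}}\,x_n\varrho_s x_n \;=\; \lim_{n\to\infty}{{\mathsf{tr}}}\,\varrho_s x_n^2,
\]
which combined with the preceding step yields \eqref{infend1}. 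For \eqref{infend2} I would substitute the explicit series representation \eqref{filtern2} for $\hat{\psi}_0$ and observe that the rank-one building blocks $e_j\,\varrho^{\,\prime}_s\,e_k$ are mutually Hilbert-Schmidt-orthogonal across distinct index pairs: a short calculation using $e_\ell e_m = \delta_{\ell m}e_\ell$ shows ${{\mathsf{tr}}}\,(e_j\,\varrho^{\,\prime}_s\,e_k)(e_{j'}\,\varrho^{\,\prime}_s\,e_{k'})^* = \delta_{jj'}\delta_{kk'}|\langle\varrho^{\,\prime}_s\varphi_k,\varphi_j\rangle|^2$. The H.S. norm of the series then collapses into the claimed orthogonal $\ell^2$-sum, producing \eqref{infend2} at once.

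The only delicate point is ensuring that the double series \eqref{filtern2} indeed converges to $\hat{\psi}_0$ in the Hilbert-Schmidt norm (not merely weakly or in a strong operator sense), so that the norm may be evaluated term by term. But this is already secured by \eqref{nE0}--\eqref{nE2}: the diagonal partial sums $E_n\hat{\psi}_0 E_n = x_n\sqrt{\varrho_s}$ are H.S.-convergent to $\hat{\psi}_0$, and in a Hilbert space the square-summability of the coefficients of an orthogonal system automatically upgrades this to unconditional H.S.-convergence of the full double series. No genuine analytic obstacle therefore arises; the whole theorem is little more than a restatement of \eqref{u9a0}--\eqref{u9a} in the spectral basis of $\varrho_s$, with \eqref{loes1} supplying the explicit matrix elements of $x_n$.
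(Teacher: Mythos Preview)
Your proposal is correct and follows precisely the route the paper itself takes: the paper's proof consists of the single sentence preceding the theorem, which reads ``Taking together Example \ref{sepa}, formula \eqref{hcurve2} and \eqref{u9a}, in view of \eqref{loes} and \eqref{filtern2} the following results are seen to hold,'' and your three steps are exactly the unpacking of that chain. The only addition you make is the explicit verification of the Hilbert--Schmidt orthogonality of the blocks $e_j\varrho_s'\,e_k$, which the paper leaves implicit.
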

It should be obvious now how to adapt the steps of the derivation of Theorem \ref{infend0}  in favour of  the finite dimensional case, say with  ${{\mathsf{dim}}}\,{\mathcal H}=n<\infty$. But in contrast to the above, for finite dimensions according to Remark \ref{implHS}\,\eqref{impHS2} one even knows that any differentiable curve $\gamma$ of density operators of full support automatically meets all the requirements supposed to be fulfilled for curves of the class figuring in Example \ref{sepa}, and then the formula \eqref{hcurve2} can be satisfied for an arbitrary $C^1$-curve
\begin{equation*}
  \gamma:  I\ni t\longmapsto \varrho_t\in {\mathcal S}^{\mathsf{faithful}}_0({\mathsf B}({\mathcal H}))
\end{equation*}
For such $\gamma$, at each $s\in I$ the formulas of \eqref{infend} then take the form
\begin{subequations}\label{infendx}
\begin{eqnarray}\label{infend10}
\biggl(\lim_{t\to s}\frac{d_B(\varrho_{t},\varrho_s)}{|t-s|}\biggr)^2 & =&  {{\mathsf{tr}}}\,\varrho_s x^2\\
\label{infend20}
&=& \sum_{j,k\leq n} \frac{\lambda_k}{(\lambda_j+\lambda_k)^2}\,|\langle\varrho^{\,\prime}_s\varphi_k,\varphi_j\rangle|^2
\end{eqnarray}
\end{subequations}
with $x$ being the unique hermitian solution of the operator equation
 \begin{equation}\label{opeq}
\varrho^{\,\prime}_s=x\, \varrho_s+\varrho_s x
\end{equation}
within ${\mathsf B}({\mathcal H})$, and with $\varrho^{\,\prime}_s$ being the derivative of $\gamma$ at $t=s$. In formula \eqref{infend20}, $\{\lambda_1\geq \lambda_2\geq\cdots \geq \lambda_n\}$ is the ordered list of eigenvalues of $\varrho_s$ (with each eigenvalue repeated according to its multiplicity), and $\{\varphi_1,\varphi_2,\ldots,\varphi_n\}$ is an eigenbasis of $\varrho_s$.
\begin{remark}\label{uhlfin}
\begin{enumerate}
\item \label{uhlfin1}
 Relating formula \eqref{infend10}, which is to be satisfied under the auxiliary condition \eqref{opeq}, by the finite dimensional variant of Theorem \ref{infend0} especially a famous  result of Uhlmann \cite{Uhlm:93} is reproduced, and which is saying that for any curve $\gamma:\, t\longmapsto \varrho_t$ of density matrices of rank $n$ and allowing for a generation as $\varrho_t=c_t c_t^*$ by a differentiable curve $t\longmapsto c_t$ of $n\times n$-matrices, the square of the length element ${\mathrm d}\/s$ of the Bures metric along $\gamma$ at $\varrho_t$  reads as
\begin{subequations}\label{uhlfin00}
\begin{equation}\label{uhlfin0}
    {{\mathrm d}\/s}^2={{{\mathsf{tr}}}}\, \varrho_t\, {{\mathrm d}\/x}^2
    \end{equation}
with the matrix-valued $1$-form ${\mathrm d}\/x$ and which is uniquely determined through
\begin{equation}\label{uhlfin1a}
{\mathrm d}\/\varrho_t={\mathrm d}\/x\, \varrho_t+\varrho_t\,{\mathrm d}\/x
\end{equation}
\end{subequations}
The additional information coming along  with \eqref{infend10} and mentioned in context of Remark \ref{implHS}\,\eqref{impHS2} is that Uhlmann's result \eqref{uhlfin00} holds for any differentiable curve evolving within the set of density matrices of rank $n$.
\item  \label{uhlfin2}
According to \eqref{infend20} the square of the length element along $\gamma$ at $\varrho_t$ reads
\begin{equation}\label{huebner}
 {{\mathrm d}\/s}^2=\sum_{j,k\leq n} \frac{\lambda_k(t)}{\bigl(\lambda_j(t)+\lambda_k(t)\bigr)^2}\,|\langle {\mathrm d}\/\varrho_t\,\varphi_k(t),\varphi_j(t)\rangle|^2
\end{equation}
with $\{\lambda_k(t)\}$ and $\{\varphi_k(t)\}$ relating to the ordered eigenvalue sequence and corresponding basis of normalized eigenvectors of $\varrho_t$, respectively. Under the original suppositions of Uhlmann  \eqref{huebner} is due to Huebner \cite{Hueb:92,Hueb:92.1,Hueb:93}.
\item  \label{uhlfin3}
Remark that Example \ref{sepa} as well as  the fact mentioned on in Remark \ref{implHS}\,\eqref{impHS2}
extend to all $C^1$-curves evolving within each (affinely) convex subset
\begin{equation*}
    {\mathcal S}_0^\omega({\mathsf B}({\mathcal H}))= \Omega_{{\mathsf B}({\mathcal H})}(\omega) \cap {\mathcal S}_0({\mathsf B}({\mathcal H}))
\end{equation*}
with the stratum generated by an arbitrarily given normal state $\omega$ with support $s(\omega)$ obeying ${{\mathsf{dim}}}\,s(\omega)<\infty$.
\end{enumerate}
\end{remark}
\subsubsection{Processing density operators of full support \textup{(}special cases\textup{)}}\label{geomann}
Let ${\mathcal H}$ be a separable Hilbert space. Relating the subset ${\mathcal S}^{\mathsf{faithful}}_0({\mathsf B}({\mathcal H}))$ of the normal state space over the special ${\mathsf C}^*$-algebra $M={\mathsf B}({\mathcal H})$ the same assumptions as in \ref{finsstat} will be supposed.
We are going to consider the class of $\mu$-leaves as submanifolds of density operators of full support, labelled by $\mu\in {\mathcal S}^{\mathsf{faithful}}_0({\mathsf B}({\mathcal H}))$, see Definition \ref{blatt}.
\begin{theorem}\label{geomannth}
Each $\mu$-leaf ${\mathcal S}_0(\mu)$ is geodesically convex under the Bures metric.
\end{theorem}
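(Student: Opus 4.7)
The plan is to reduce the claim to Example \ref{Bopex} via the identification of ${\mathcal S}_0(\mu)$ with a set of the form $\Omega_{\mathcal P}({\mathcal H})$, where $\mathcal P$ is an ascending system of finite orthoprojections naturally associated with $\mu$.

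First I would recall the spectral representation $\mu=\sum_k\mu_k\Delta P_k$ already assembled in the proof of Example \ref{fins1}, where the $\mu_k$ are the distinct (strictly positive) eigenvalues of $\mu$ arranged in decreasing order and the $\Delta P_k$ are the corresponding spectral projections. Since $\mu$ is trace-class with full support, each $\Delta P_k$ has finite multiplicity $m_k(\mu)={\mathsf{dim}}\,\Delta P_k {\mathcal H}<\infty$, and the cumulative orthoprojections $P_k=\sum_{j\leq k}\Delta P_j$ form a strictly ascending (finite or countably infinite) directed system ${\mathcal P}_\mu=\{P_k\}$ of finite orthoprojections with ${\mathsf{l.u.b.}}\{P_k\}={\mathsf 1}$. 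Thus ${\mathcal P}_\mu$ fulfils all the hypotheses required in Example \ref{Bopex}.

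Next I would invoke the chain of identities \eqref{blattdar1}--\eqref{blattdar3} established inside the proof of Example \ref{fins1}. By spectral calculus, commutation of a density operator $\varrho$ of full support with $\mu$ is equivalent to commutation of $\varrho$ with each of the spectral projections $\Delta P_k$, which in turn is equivalent to commutation with each $P_k$. This yields the identification
\begin{equation*}
{\mathcal S}_0(\mu)=\Omega_{{\mathcal P}_\mu}({\mathcal H}),
\end{equation*}
which reduces the theorem to showing that $\Omega_{{\mathcal P}_\mu}({\mathcal H})$ is geodesically convex. The latter however is precisely the content of Example \ref{Bopex} applied to ${\mathcal P}={\mathcal P}_\mu$: arc-determinacy holds because $\Omega_{{\mathcal P}_\mu}({\mathcal H})\subset {\mathcal S}^{\mathsf{faithful}}_0({\mathsf B}({\mathcal H}))$ is a subset of faithful normal states (so $\nu_\varrho=\varrho_\nu={\mathsf 0}$, and uniqueness of the geodesic arc follows from Theorem \ref{einga}), and convexity follows because each intermediate state on the unique geodesic arc connecting two states of $\Omega_{{\mathcal P}_\mu}({\mathcal H})$ again has full support and commutes with every $P_k$, as guaranteed by Example \ref{Bop1} together with Remark \ref{Bop1c}\,\eqref{Bop1cc}.

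There is essentially no obstacle beyond assembling these prior ingredients cleanly; the only point requiring a sentence of care is verifying that commutation with $\mu$ really is equivalent to commutation with every cumulative spectral projection $P_k$, which is a purely spectral-theoretic remark but must handle both the case of finitely many and countably infinitely many distinct eigenvalues of $\mu$ (including the limiting case $\#\mathsf{spec}_p(\mu)=1$ discussed in Remark \ref{geomann0}, where ${\mathcal S}_0(\mu)$ coincides with the full set of faithful normal states in a finite-dimensional $\mathcal H$ and the conclusion is just the geodesic convexity of $\Omega_{\{{\mathsf 1}\}}({\mathcal H})={\mathcal S}^{\mathsf{faithful}}_0({\mathsf B}({\mathcal H}))$).
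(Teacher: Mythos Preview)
Your proof is correct and follows essentially the same route as the paper: the paper's one-line proof (``The assertion follows along with Example \ref{fins1}'') relies on the identification ${\mathcal S}_0(\mu)=\Omega_{{\mathcal P}_\mu}({\mathcal H})$ established in the proof of Example \ref{fins1}, together with the geodesic convexity of $\Omega_{\mathcal P}({\mathcal H})$ from Example \ref{Bopex} (equivalently Example \ref{imbed0ex}). You have simply spelled out these ingredients explicitly rather than referring back to them.
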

\begin{proof}
The assertion follows along with Example \ref{fins1}.
\end{proof}
Considering the $\mu$-leaf under the Bures geometry locally at $\varrho\in {\mathcal S}_0(\mu)$  requires considering all parameterized differentiable curves $$\gamma: I\ni t\longmapsto \varrho_t\in  {\mathcal S}_0(\mu)$$ passing through $\varrho=\varrho_0$ ($I$ be an open interval containing $0$ as inner point) and for which  $\gamma$-compliant implementations exist. Since by definition the curves of this class are special cases of \eqref{hcurve}, each such $\gamma$ is Finslerian and at $\varrho$ has to obey the Finslerian law \eqref{hcurve2}. Infinitesimally, this is equivalent to the fact that the square of the length element ${\mathsf{d}}s$ along $\gamma$ at $\varrho$ can be expressed as
\begin{equation}\label{geomann1}
{\mathsf{d}}s^2=\|{\mathsf{d}}\varrho\|_\varrho^2
\end{equation}
Let $\varrho^{\,\prime}=\varrho_t^{\,\prime}|_{t=0}$. Since $\varrho_t\in {\mathcal S}_0(\mu)$ holds for each $t$, from \eqref{blattdar2} we see that $\varrho^{\,\prime}\Delta P_k=\Delta P_k \,\varrho^{\,\prime}$ holds, for all $k$. Hence, \eqref{geomann1} requires the auxiliary condition
\begin{equation}\label{geomann2}
{\mathsf{d}}\varrho\,\Delta P_k=\Delta P_k\,{\mathsf{d}}\varrho,\ \forall\,k
\end{equation}
to be considered. Also, in view of \eqref{u9a0} we can be assured that the equation
\begin{equation}\label{geomann3}
{\mathsf{d}}\varrho= {{\mathsf{d}}a} \sqrt{\varrho}+\sqrt{\varrho} \,{{\mathsf{d}}a}^*
\end{equation}
is uniquely solvable with Hilbert-Schmidt valued ${{\mathsf{d}}a}$ obeying ${{\mathsf{d}}a}\in \bigl[{\mathsf B}({\mathcal H})_{\mathsf h}\sqrt{\varrho}\bigr]_{{\mathsf{H.S.}}}$.
According to \eqref{u9a} with the help of  ${{\mathsf{d}}a}$ we then have
\begin{equation}\label{geomann4}
\|{\mathsf{d}}\varrho\|_\varrho^2=\|{{\mathsf{d}}a}\|_{\mathsf{ H.S.}}^2={{{\mathsf{tr}}}}\,{{\mathsf{d}}a}\, {{\mathsf{d}}a}^*
\end{equation}
Since \eqref{geomann2} is satisfied for ${\mathsf{d}}\varrho$ at $\varrho\in {\mathcal S}_0(\mu)$, the unique form ${{\mathsf{d}}a}$ can be resolved into an explicit expression reading in terms of ${\mathsf{d}}\varrho$ and $\varrho$. This will be done now.
\begin{definition}\label{u1form0}
A possibly unbounded, self-adjoint linear operator ${\mathsf{d}} x$ on ${\mathcal H}$, with dense domain of definition ${\mathcal D}({\mathsf{d}} x)$, will be referred to as ${\mathsf{d}}\varrho$-coordinated  $1$-form if it proves to be a solution of the (differential) operator equation
\begin{equation}\label{u1form1}
{\mathsf{d}}\varrho={\mathsf{d}} x \,\varrho+\varrho\, {\mathsf{d}} x
\end{equation}
By the claim that ${\mathsf{d}} x$  be a solution of \eqref{u1form1} it will be understood that the identity  ${\mathsf{d}}\varrho\, \psi={\mathsf{d}} x \,\varrho\, \psi+\varrho\, {\mathsf{d}} x \,\psi$ proves to be fulfilled, at any  $\psi\in{\mathcal D}({\mathsf{d}} x)$.
\end{definition}
Under the assumptions of this paragraph the ${\mathsf{d}}\varrho$-coordinated  $1$-form exists and is uniquely determined. This follows since $\varrho\in {\mathcal S}_0(\mu)$ and the condition \eqref{geomann2} together imply  the condition \eqref{ope1a} in respect of $\omega={\mathsf{d}}\varrho$ and $\varrho$ to be fulfilled. But then Lemma \ref{second} may be applied and yields the assertion.  Moreover, according to Theorem \ref{hsope} with the help of the ${\mathsf{d}}\varrho$-coordinated $1$-form we find that ${{\mathsf{d}}a}={\mathsf{d}} x\,\sqrt{\varrho}\in {\mathsf{H.S.}}({\mathcal H})$ is fulfilled. The latter together with \eqref{geomann4} and \eqref{geomann1} fit together as follows.
\begin{theorem}\label{u1form2}
Let $\gamma\subset {\mathcal S}_0(\mu)$ be a curve passing through $\varrho$ and admitting $\gamma$-compliant representations. The Bures-length element ${\mathsf{d}}s$ along $\gamma$ at $\varrho$ satisfies
\begin{equation}\label{u1form2a}
{{\mathsf{d}}s}^2={{\mathsf{tr}}}\,\bigl({\mathsf{d}} x\,\sqrt{\varrho}\bigr)\bigl({\mathsf{d}} x\,\sqrt{\varrho}\bigr)^*=\| {\mathsf{d}} x\,\sqrt{\varrho}\,\|_{{\mathsf{H.S.}}}^2
\end{equation}
with the ${\mathsf{d}}\varrho$-coordinated $1$-form ${\mathsf{d}} x$ and which is  the unique solution of \eqref{u1form1} to the differential tangent form ${\mathsf{d}} \varrho$ at $\varrho$ and pointing along $\gamma$.
\end{theorem}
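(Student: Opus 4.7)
The plan is to assemble the theorem from three ingredients already laid out in the preceding discussion: (i) the Finslerian character of the $\mu$-leaf, (ii) the Hilbert--Schmidt representation of the length element through the auxiliary form ${\mathsf{d}}a$ solving \eqref{geomann3}, and (iii) the reduction of ${\mathsf{d}}a$ to the $\mathsf{d}\varrho$-coordinated $1$-form ${\mathsf{d}}x$ under the commutation constraint forced by $\varrho\in{\mathcal S}_0(\mu)$. The argument is then a short synthesis, so the work is really in invoking the right lemmas in the right order rather than a new computation.

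First I would recall that by Example \ref{fins1}/Theorem \ref{geomannth} the $\mu$-leaf is a Finslerian submanifold, so along any admissible curve $\gamma$ through $\varrho$ the length element satisfies $\mathsf{d}s^2=\|\mathsf{d}\varrho\|_\varrho^2$, see \eqref{geomann1}. Next, invoking the general analysis in \eqref{u9a0}--\eqref{u9a} applied at $s=0$ with $\varrho_s=\varrho$ and $\varrho_s^{\,\prime}$ replaced by the differential tangent form $\mathsf{d}\varrho$, one obtains a unique $\mathsf{d}a\in\bigl[\mathsf{B}(\mathcal H)_{\mathsf h}\sqrt{\varrho}\,\bigr]_{\mathsf{H.S.}}$ with $\mathsf{d}\varrho=\mathsf{d}a\,\sqrt{\varrho}+\sqrt{\varrho}\,\mathsf{d}a^*$ and
\begin{equation*}
\|\mathsf{d}\varrho\|_\varrho^2=\|\mathsf{d}a\|_{\mathsf{H.S.}}^2={\mathsf{tr}}\,\mathsf{d}a\,\mathsf{d}a^*.
\end{equation*}
Thus the proof reduces to showing that $\mathsf{d}a$ factors as $\mathsf{d}a=\mathsf{d}x\,\sqrt{\varrho}$ with $\mathsf{d}x$ being the uniquely determined (possibly unbounded) self-adjoint solution of \eqref{u1form1}.

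For this reduction I would use the commutation property $\mathsf{d}\varrho\,\Delta P_k=\Delta P_k\,\mathsf{d}\varrho$ forced by $\gamma\subset{\mathcal S}_0(\mu)$ via the spectral decomposition \eqref{blatterz} of $\mu$: differentiating the defining relation $\varrho_t\,\mu=\mu\,\varrho_t$ at $t=0$ gives $\mathsf{d}\varrho\,\mu=\mu\,\mathsf{d}\varrho$, whence commutation with each spectral projection $\Delta P_k$ of $\mu$ follows by spectral calculus. With both $\varrho\in{\mathcal S}_0(\mu)$ and this commutation condition (\eqref{geomann2}) at hand, the hypotheses of Lemma \ref{second} (condition \eqref{ope1a} with $\omega=\mathsf{d}\varrho$) are met, and the lemma produces a unique densely defined self-adjoint $\mathsf{d}x$ solving \eqref{u1form1}. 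Theorem \ref{hsope} then upgrades this to the Hilbert--Schmidt identity $\mathsf{d}x\,\sqrt{\varrho}\in{\mathsf{H.S.}}(\mathcal H)$ and forces $\mathsf{d}a=\mathsf{d}x\,\sqrt{\varrho}$ by uniqueness in $\bigl[\mathsf{B}(\mathcal H)_{\mathsf h}\sqrt{\varrho}\,\bigr]_{\mathsf{H.S.}}$. Substituting into $\|\mathsf{d}a\|_{\mathsf{H.S.}}^2$ and chaining the equalities yields \eqref{u1form2a}.

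The main obstacle, and the only step not essentially immediate from the preceding setup, is confirming that the auxiliary Hilbert--Schmidt solution $\mathsf{d}a$ genuinely lifts to a (possibly unbounded) self-adjoint operator $\mathsf{d}x$ satisfying $\mathsf{d}a=\mathsf{d}x\,\sqrt{\varrho}$ in the infinite-dimensional case; this is where Lemma \ref{second} and Theorem \ref{hsope} do the heavy lifting, replacing in the infinite-dimensional setting the elementary bounded inversion argument (as in \cite{LuRo:59, Uhlm:93}) used in the finite-dimensional case. Once that identification is in place, the closing identity is a direct rewriting of \eqref{geomann4}.
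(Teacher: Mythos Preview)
Your proposal is correct and follows essentially the same route as the paper: the argument is the synthesis given in the paragraph immediately preceding the theorem statement, combining the Finslerian property of the $\mu$-leaf (Example \ref{fins1}), the Hilbert--Schmidt representation \eqref{geomann3}--\eqref{geomann4} of $\|{\mathsf{d}}\varrho\|_\varrho$, the commutation constraint \eqref{geomann2}, and then Lemma \ref{second} together with Theorem \ref{hsope} to identify ${\mathsf{d}}a={\mathsf{d}}x\,\sqrt{\varrho}$. Your explicit derivation of \eqref{geomann2} by differentiating $\varrho_t\mu=\mu\varrho_t$ is a welcome clarification of a step the paper merely states.
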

\begin{remark}\label{endu1}
According to Remark \ref{geomann0} the case of Uhlmann,  which has been discussed extensively in Remark \ref{uhlfin}  \eqref{uhlfin1}, can be seen as a special case of Theorem \ref{u1form2} if the category of all $\mu$-leaves with $ {\mathsf{dim}} {\mathcal H}<\infty$ and corresponding state $\mu$ of equipartition $\mu=(1/{ {\mathsf{dim}} {\mathcal H}})\,{\mathsf 1}$ is considered there. Clearly, in those cases the ${\mathsf{d}}\varrho$-coordinated $1$-form ${\mathsf{d}} x$ is bounded and thus the arguments under the trace in \eqref{u1form2a} can be rearranged such that the formula finally reads as  ${{\mathsf{d}}s}^2={{\mathsf{tr}}}\,\varrho\,{\mathsf{d}} x^2$, see  \eqref{uhlfin0}.
\end{remark}
\newpage
\section*{\protect{\mbox{}\hfill{}}Appendices\protect{\hfill{}\mbox{}}}
\appendix
\renewcommand{\thelemma}{\Alph{section}.\arabic{lemma}}
\renewcommand{\thedefinition}{\Alph{section}.\arabic{defi}}
\renewcommand{\thecorolla}{\Alph{section}.\arabic{corolla}}
\renewcommand{\theequation}{\Alph{section}.\arabic{equation}}
\setcounter{equation}{0}
\section{Decomposing compact operators}\label{app_a0}
Let ${\mathcal{H}}$ be a Hilbert-space, with scalar product $\langle \xi,\eta\rangle$ defined between arguments $\xi,\eta\in {\mathcal{H}}$ (we suppose linearity in $\xi$, and conjugate linearity in $\eta$). Also, in the following  commutation  $x y=y x$  between $x,y\in {\mathsf B}({\mathcal H})$ will be indicated as $x\smile y$.

Suppose $\Lambda\in {\mathsf B}({\mathcal H})$ to be positive and of full support, and be $\psi\not={\mathsf{0}}$  a unit vector in the range ${\mathcal{R}}(\sqrt{\Lambda})$ of $\sqrt{\Lambda}$. Note that ${\mathcal{R}}(\sqrt{\Lambda})$ is dense in ${\mathcal{H}}$ by assumptions on $\Lambda$. Especially, since it is also of full support, as a map $\sqrt{\Lambda}$ is acting injectively from $ {\mathcal{H}}$ onto ${\mathcal{R}}(\sqrt{\Lambda})$. Therefore, there exists uniquely determined $\varphi_0\not={\mathsf{0}}$ such that
\begin{equation}\label{prorelbed}
\psi=\sqrt{\Lambda}\, \varphi_0
\end{equation}
Let $\varphi$ be the normalization of $\varphi_0$, and let $p_\psi$
and $p_\varphi$ be the minimal orthoprojections with $p_\psi \psi=\psi$ and $p_\varphi \varphi=\varphi$. It is easy to see that the following relation holds
\begin{equation}\label{prore1}
p_\psi=\|\varphi_0\|^2 \sqrt{\Lambda}\, p_\varphi \sqrt{\Lambda}
\end{equation}
For real $\lambda$, according to \eqref{prore1} we then have
\begin{equation}\label{prore2}
\Lambda -\lambda p_\psi= \sqrt{\Lambda}\,\bigl({\mathsf{1}}-\lambda\|\varphi_0\|^2 p_\varphi\bigr) \sqrt{\Lambda}
\end{equation}
\begin{lemma}\label{prore0}
Let $\Lambda\in {\mathsf B}({\mathcal H})$, with $\Lambda\geq {\mathsf{0}}$ and support $s(\Lambda)={\mathsf{1}}$. Suppose the unit vector $\psi$ and vector $\varphi_0$ to obey \textrm{\eqref{prorelbed}}, and be $\Lambda_\lambda=\Lambda -\lambda p_\psi$, with $\lambda \in {\mathbb{R}}$, and $\lambda_0=\|\varphi_0\|^{-2}$. Then
\begin{enumerate}
  \item\label{prore0a} $\Lambda_{\lambda\phantom{_0}}$ is positive if, and only if, $\lambda \leq \lambda_0$;
  \item\label{prore0b} $\Lambda_{\lambda\phantom{_0}}$ is of full support for all\phantom{, } $\lambda < \lambda_0$;
  \item\label{prore0aa} $\psi\in {\mathcal{R}}(\sqrt{\Lambda_{\lambda\phantom{_0}}})$ if, and only if, $\lambda < \lambda_0$ \textup{(}provided $\lambda \leq \lambda_0$\textup{)};
  \item\label{prore0c} $\Lambda_{\lambda_0}$ is of full support if, and only if, $\psi\not\in {\mathcal{R}}(\Lambda)$;
  \item\label{prore0d} $\psi\in {\mathcal{R}}(\Lambda)$, with $\psi=\Lambda\xi$, implies ${\mathsf{dim}}\,{\mathsf{ker}}\Lambda_{\lambda_0}=1$, with $\Lambda_{\lambda_0} \xi={\mathsf{0}}$.
\end{enumerate}
Moreover, if $\Lambda$ is of trace-class, ${{\mathsf{tr}}}\,\Lambda <\infty$, for a unit vector $\psi$ the condition $\psi\in {\mathcal{R}}(\sqrt{\Lambda})$ is necessary for assuring that $\lambda>0$ exists such that $\Lambda -\lambda p_\psi\geq {\mathsf{0}}$.
\end{lemma}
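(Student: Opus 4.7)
The proof hinges almost entirely on the factorization \eqref{prore2}, which rewrites $\Lambda_\lambda$ as $\sqrt{\Lambda}\, A_\lambda\, \sqrt{\Lambda}$ with $A_\lambda = {\mathsf{1}} - \lambda\|\varphi_0\|^2 p_\varphi$. Because $A_\lambda$ is a rank-one perturbation of the identity, its spectrum is just $\{1,\, 1-\lambda\|\varphi_0\|^2\}$, and because $\sqrt{\Lambda}$ is injective with dense range (as $s(\Lambda) = {\mathsf{1}}$), any positivity or range question about $\Lambda_\lambda$ transfers to the analogous question about $A_\lambda$. My plan is to dispatch items \eqref{prore0a}--\eqref{prore0d} in succession by exploiting this reduction, and then handle the concluding trace-class statement separately via a Douglas-type factorization argument.

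For \eqref{prore0a} I would observe that $\langle\Lambda_\lambda \xi,\xi\rangle = \langle A_\lambda \sqrt{\Lambda}\xi,\sqrt{\Lambda}\xi\rangle$, and since $\sqrt{\Lambda}\xi$ runs over a dense subset of $\mathcal{H}$, positivity of $\Lambda_\lambda$ is equivalent to positivity of $A_\lambda$, which happens precisely when $1 - \lambda\|\varphi_0\|^2 \geq 0$. For \eqref{prore0b}, when $\lambda < \lambda_0$ the rank-one perturbation $A_\lambda$ is bounded below by $c := \min\{1, 1-\lambda\|\varphi_0\|^2\} > 0$, giving $c\,\Lambda \leq \Lambda_\lambda \leq \Lambda$, so $\Lambda_\lambda$ inherits the full support of $\Lambda$. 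The same two-sided comparison, via the standard operator-range theorem (Douglas), yields the equality $\mathcal{R}(\sqrt{\Lambda_\lambda}) = \mathcal{R}(\sqrt{\Lambda})$ and thus $\psi \in \mathcal{R}(\sqrt{\Lambda_\lambda})$, settling the forward implication of \eqref{prore0aa} for $\lambda < \lambda_0$. The reverse implication (the $\lambda = \lambda_0$ case) is the one delicate step: here $A_{\lambda_0} = p_\varphi^\perp$, and I would assume for contradiction that $\psi = \sqrt{\Lambda_{\lambda_0}}\,\zeta$ for some $\zeta$, whence by the characterization $\psi \in \mathcal{R}(\sqrt{T})$ iff $|\langle\psi,\xi\rangle|^2 \leq c\,\langle T\xi,\xi\rangle$ uniformly, I would need a constant $c$ with $\|\varphi_0\|^2 |\langle\varphi,\sqrt{\Lambda}\xi\rangle|^2 \leq c \|p_\varphi^\perp\sqrt{\Lambda}\xi\|^2$; choosing $\xi_n$ so that $\sqrt{\Lambda}\xi_n \to \varphi$ (possible by density of $\mathcal{R}(\sqrt{\Lambda})$) makes the left side tend to $\|\varphi_0\|^2$ while the right side tends to $0$, a contradiction.

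Claim \eqref{prore0c} reduces to analyzing $\ker \Lambda_{\lambda_0}$. The identity $\Lambda_{\lambda_0} = \sqrt{\Lambda}\, p_\varphi^\perp\, \sqrt{\Lambda}$ and injectivity of $\sqrt{\Lambda}$ show that $\Lambda_{\lambda_0}\xi = 0$ precisely when $\sqrt{\Lambda}\xi \in \mathbb{C}\varphi$, which has a nontrivial solution if and only if $\varphi \in \mathcal{R}(\sqrt{\Lambda})$; since $\varphi_0 = \|\varphi_0\|\varphi$ and $\psi = \sqrt{\Lambda}\varphi_0$, this last condition is equivalent to $\psi \in \mathcal{R}(\Lambda)$. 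Item \eqref{prore0d} then follows: when $\psi = \Lambda\xi$, injectivity of $\sqrt{\Lambda}$ forces $\sqrt{\Lambda}\xi = \varphi_0$, so $p_\varphi^\perp\sqrt{\Lambda}\xi = 0$, giving $\Lambda_{\lambda_0}\xi = 0$; and the preimage of the one-dimensional space $\mathbb{C}\varphi$ under injective $\sqrt{\Lambda}$ is at most one-dimensional, yielding $\dim \ker\Lambda_{\lambda_0} = 1$.

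For the final assertion, the trace-class assumption places the discussion in an interesting regime (where $\mathcal{R}(\sqrt{\Lambda})$ is a proper dense subspace), but the necessity of $\psi \in \mathcal{R}(\sqrt{\Lambda})$ holds generally: from $\Lambda \geq \lambda p_\psi$ with $\lambda > 0$ I would invoke Douglas's factorization theorem to conclude $\mathcal{R}(p_\psi^{1/2}) \subset \mathcal{R}(\sqrt{\Lambda})$, and since $\mathcal{R}(p_\psi^{1/2}) = \mathbb{C}\psi$, this gives $\psi \in \mathcal{R}(\sqrt{\Lambda})$. The main obstacle in the whole proof is the $\lambda = \lambda_0$ case of \eqref{prore0aa}: the density argument via approximating sequences is straightforward once set up, but care must be taken that the approximation is possible precisely because $\sqrt{\Lambda}$ has dense range, a hypothesis that ultimately traces back to the full-support assumption on $\Lambda$.
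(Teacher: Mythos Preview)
Your argument is correct and follows the same overall factorization strategy as the paper, but three of the individual steps take genuinely different routes. For the range identity $\mathcal{R}(\sqrt{\Lambda_\lambda})=\mathcal{R}(\sqrt{\Lambda})$ when $\lambda<\lambda_0$, you sandwich $\Lambda_\lambda$ between multiples of $\Lambda$ and invoke Douglas's range-inclusion theorem; the paper instead writes down the polar decomposition $\bigl({\mathsf 1}-\lambda\|\varphi_0\|^2 p_\varphi\bigr)^{1/2}\sqrt{\Lambda}=V_\lambda\sqrt{\Lambda_\lambda}$ with unitary $V_\lambda$ and reads the range equality off from that explicit factorization. (A harmless slip: your stated upper bound $\Lambda_\lambda\le\Lambda$ fails for $\lambda<0$, but only the lower bound $c\Lambda\le\Lambda_\lambda$ is needed to place $\psi$ in $\mathcal{R}(\sqrt{\Lambda_\lambda})$.) For the critical case $\lambda=\lambda_0$ of \eqref{prore0aa}, your approximation argument via $\sqrt{\Lambda}\xi_n\to\varphi$ is direct and self-contained; the paper instead argues by self-reference---if $\psi\in\mathcal{R}(\sqrt{\Lambda_{\lambda_0}})$ then the computation behind \eqref{prore0a}, applied with $\Lambda_{\lambda_0}$ in place of $\Lambda$, gives $\Lambda_{\lambda_0}\ge\varepsilon p_\psi$ for some $\varepsilon>0$, whence $\Lambda\ge(\lambda_0+\varepsilon)p_\psi$, contradicting maximality of $\lambda_0$. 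Finally, for the trace-class supplement you use Douglas once more, while the paper appeals to Sakai's Radon--Nikodym theorem to write $p_\psi=\sqrt{\Lambda}\,zz^*\sqrt{\Lambda}$; your route is more elementary and, as you observe, shows that the trace-class hypothesis is not actually required for this implication.
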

\begin{proof}
 Since ${\mathcal{R}}(\sqrt{\Lambda})$ is dense in ${\mathcal{H}}$, the validity of \eqref{prore0a} is a direct consequence of \eqref{prore2}. For $\lambda<\lambda_0$, since  $({\mathsf{1}}-\lambda\|\varphi_0\|^2 p_\varphi)$ is bounded invertible and $\sqrt{\Lambda}$ has full support,  from \eqref{prore0a} and \eqref{prore2} one infers that $\Lambda_\lambda$ is of full support. This is \eqref{prore0b}. Also, for $\lambda<\lambda_0$, due to
 $$
 \sqrt{\Lambda_\lambda}=\bigl|\bigl({\mathsf{1}}-\lambda\|\varphi_0\|^2 p_\varphi\bigr)^\frac{1}{2} \sqrt{\Lambda}\,\bigr|
$$
the polar decomposition of $\bigl({\mathsf{1}}-\lambda\|\varphi_0\|^2 p_\varphi\bigr)^\frac{1}{2} \sqrt{\Lambda}$ reads
$
\bigl({\mathsf{1}}-\lambda\|\varphi_0\|^2 p_\varphi\bigr)^\frac{1}{2} \sqrt{\Lambda}=V_\lambda \sqrt{\Lambda_\lambda}
$
with unitary $V_\lambda$. But then we have $$\sqrt{\Lambda}=\sqrt{\Lambda_\lambda}V_\lambda^*\bigl({\mathsf{1}}-\lambda\|\varphi_0\|^2 p_\varphi\bigr)^{-\frac{1}{2}}$$ By bounded invertibility of $V_\lambda^*\bigl({\mathsf{1}}-\lambda\|\varphi_0\|^2 p_\varphi\bigr)^{-\frac{1}{2}}$ from this ${\mathcal{R}}(\sqrt{\Lambda})={\mathcal{R}}(\sqrt{\Lambda_\lambda})$ follows.
Also, that $\psi\not\in {\mathcal{R}}(\sqrt{\Lambda_{\lambda_0}})$ must be fulfilled, follows since otherwise application of \eqref{prore0a} with $\Lambda_{\lambda_0}$ instead of $\Lambda$ would imply $
 \Lambda_{\lambda_0}\geq \varepsilon p_\psi$ to hold, for some $\varepsilon>0$, and from which
 $$\Lambda=\Lambda_{\lambda_0}+\lambda_0 p_\psi \geq \bigl(\lambda_0+\varepsilon\bigr) p_\psi$$ had to be followed. This contradicts the fact that $\lambda=\lambda_0$ is the largest real with $\Lambda\geq\lambda p_\psi$. Thus $\psi\not\in {\mathcal{R}}(\sqrt{\Lambda_{\lambda_0}})$ is true. In view of the previous \eqref{prore0aa} then can be taken for justified.

 Relating \eqref{prore0c}, suppose that $\psi$ according to \eqref{prorelbed} in addition is obeying $\psi\in {\mathcal{R}}(\Lambda)$. This equivalently means that $\varphi_0=\sqrt{\Lambda} \xi$ with unique $\xi\in {\mathcal{H}}\backslash \{{\mathsf{0}}\}$, and then $\Lambda_{\lambda_0}\xi=\Lambda \xi-\lambda_0\langle \xi,\psi\rangle \psi=(1-\lambda_0\langle \xi,\psi\rangle) \psi$. Note that $\langle \xi,\psi\rangle=\langle \xi,\Lambda\xi\rangle
=\|\sqrt{\Lambda} \xi\|^2=\|\varphi_0\|^2$. Hence, from the previous we infer that  $1=\lambda_0\langle \xi,\psi\rangle$ and $\Lambda_{\lambda_0}\xi={\mathsf{0}}$. That is, $\Lambda_{\lambda_0}$ cannot be of full support, then.
On the other hand, if $\Lambda_{\lambda_0}$ is not of full support, then $\Lambda_{\lambda_0}\eta={\mathsf{0}}$ is fulfilled, for some $\eta\not={\mathsf{0}}$. Hence $\Lambda \eta=\lambda_0\langle \eta,\psi\rangle) \psi$. Since $\Lambda$ is injectively acting, we have $\Lambda \eta\not ={\mathsf{0}}$. The conclusion is that $\lambda_0\langle \eta,\psi\rangle\not=0$ and $\psi=\Lambda \xi$, for $\xi= \lambda_0^{-1}\langle \eta,\psi\rangle^{-1} \eta$. That is, $\psi\in {\mathcal{R}}(\Lambda)$. Since by injectivity of $\Lambda$ for given $\psi$ the vector $\xi$ is uniquely determined, in view of the previous all the vectors $\eta\not={\mathsf{0}}$  in the kernel of $\Lambda_{\lambda_0}$ have to be proportional to this unique $\xi$. Thus, provided $\Lambda_{\lambda_0}$ is not of full support, then the kernel has to be one-dimensional, with $\Lambda_{\lambda_0}\xi={\mathsf{0}}$.
In summarizing, $\psi\in {\mathcal{R}}(\Lambda)$ occurs if, and only if, $\Lambda_{\lambda_0}$ is not of full support, which is equivalent to \eqref{prore0c}, and that \eqref{prore0d} is true.
Finally, if $\Lambda$ in addition is of trace class, for a unit vector $\psi$ the condition $\Lambda\geq \lambda p_\psi$, for some $\lambda>0$, as a special case of the Radon-Nikodym theorem \cite{Saka:65, Saka:71} requires that
$$p_\psi=\sqrt{\Lambda}\,z z^*\sqrt{\Lambda}$$
be fulfilled, for some $z\in {\mathsf{B}}({\mathcal{H}})$. Hence, $\psi=\sqrt{\Lambda}\, \varphi_0$, with $\varphi_0=z z^* \sqrt{\Lambda}\psi$.
\end{proof}

\begin{lemma}\label{minidec0}
Let $\Lambda\in {\mathsf{B}}({\mathcal{H}})$ be a positive trace-class operator of full support. There is a sequence $\{\lambda_k\}\subset {\mathbb{R}}_+\backslash \{0\}$ and sequence $\{\psi_k\}_{k\in {\mathbb{N}}}\subset {\mathcal{H}}$ of unit vectors such that
\begin{equation}\label{minidec0a}
\Lambda=\sum_{k=1}^\infty \lambda_k p_{\psi_k},\text{ with }\bigvee_{j\not=k} p_{\psi_j}\not={\mathsf{1}},\, \forall\, k\in {\mathbb{N}}
\end{equation}
if, and only if,
\begin{enumerate}
  \item\label{minidec01}
  $\psi_k\in {\mathcal{R}}(\Lambda)$,\ $\forall\,k\in {\mathbb{N}}$;
  \item\label{minidec02}
  $\{\varphi_{0k}\}$, with $\psi_k=\sqrt{\Lambda}\,\varphi_{0k}$, is a maximal system of mutually orthogonal vectors.
\end{enumerate}
Moreover, if these conditions hold, then $\lambda_k=\|\varphi_{0k}\|^{-2}$ is fulfilled, for all $k\in {\mathbb{N}}$.
\end{lemma}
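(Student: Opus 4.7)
The strategy is to use Lemma \ref{prore0} as the main technical input, translating the positivity relation $\Lambda - \lambda_k p_{\psi_k}=\sum_{j\neq k}\lambda_j p_{\psi_j}\geq {\mathsf{0}}$ into spectral information about $\Lambda$ and the auxiliary vectors $\varphi_{0k}$, and conversely reconstructing $\Lambda$ from an orthonormal realisation of the $\varphi_{0k}$'s.

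\textbf{Forward direction} ($\Rightarrow$). Assume $\Lambda=\sum_{k=1}^\infty \lambda_k p_{\psi_k}$ is a minimal decomposition. From $\Lambda\geq\lambda_k p_{\psi_k}$ together with the trace-class hypothesis on $\Lambda$, the final assertion of Lemma \ref{prore0} yields $\psi_k\in {\mathcal{R}}(\sqrt{\Lambda})$, so $\psi_k=\sqrt{\Lambda}\,\varphi_{0k}$ is well-defined. Let $Q_k=\bigvee_{j\neq k}p_{\psi_j}$. Because $\Lambda_{\lambda_k}:=\Lambda-\lambda_k p_{\psi_k}=\sum_{j\neq k}\lambda_j p_{\psi_j}$ has support exactly $Q_k\neq{\mathsf{1}}$, it is positive but not of full support. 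Item \eqref{prore0a} of Lemma \ref{prore0} forces $\lambda_k\leq\|\varphi_{0k}\|^{-2}$, while item \eqref{prore0b} rules out strict inequality; hence $\lambda_k=\|\varphi_{0k}\|^{-2}$. With $\lambda_k=\lambda_0$ in the notation of Lemma \ref{prore0}, item \eqref{prore0c} then gives $\psi_k\in{\mathcal{R}}(\Lambda)$, which is condition \eqref{minidec01}, and item \eqref{prore0d} says that $\ker\Lambda_{\lambda_k}$ is one-dimensional, spanned by a vector $\xi_k$ with $\Lambda\xi_k=\psi_k$. By the injectivity of $\sqrt{\Lambda}$, $\varphi_{0k}=\sqrt{\Lambda}\,\xi_k$.

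Orthogonality of $\{\varphi_{0k}\}$ follows because $\xi_k$ spans $Q_k^\perp$ whereas $\psi_j\in Q_k$ for $j\neq k$, so
\[
\langle\varphi_{0j},\varphi_{0k}\rangle=\langle\sqrt{\Lambda}\,\xi_j,\sqrt{\Lambda}\,\xi_k\rangle=\langle\xi_j,\psi_k\rangle=0,
\]
using the symmetric argument with $j,k$ interchanged. For maximality I would set $e_k=\varphi_{0k}/\|\varphi_{0k}\|$, form the orthoprojection $E=\sum_k p_{e_k}$ onto the closed span of the $e_k$, and verify the identity $\sqrt{\Lambda}\,E\,\sqrt{\Lambda}=\sum_k\lambda_k p_{\psi_k}=\Lambda$ termwise (each rank-one piece collapses to $\lambda_k p_{\psi_k}$ by a direct calculation). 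Thus $\sqrt{\Lambda}({\mathsf{1}}-E)\sqrt{\Lambda}={\mathsf{0}}$, and the dense range of $\sqrt{\Lambda}$ forces $E={\mathsf{1}}$, i.e.\ $\{e_k\}$ is an orthonormal basis of ${\mathcal{H}}$, which is condition \eqref{minidec02}.

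\textbf{Converse direction} ($\Leftarrow$). Assume \eqref{minidec01} and \eqref{minidec02}, set $e_k=\varphi_{0k}/\|\varphi_{0k}\|$ and $\lambda_k=\|\varphi_{0k}\|^{-2}$. The same identity $\sqrt{\Lambda}\,p_{e_k}\sqrt{\Lambda}=\lambda_k p_{\psi_k}$, summed over $k$, together with $\sum_k p_{e_k}={\mathsf{1}}$ (which now holds by the assumed maximality), gives $\sum_k\lambda_k p_{\psi_k}=\Lambda$. For the minimality condition observe that
\[
\sum_{j\neq k}\lambda_j p_{\psi_j}=\sqrt{\Lambda}({\mathsf{1}}-p_{e_k})\sqrt{\Lambda}=\Lambda-\lambda_k p_{\psi_k}=\Lambda_{\lambda_0}
\]
with $\lambda_0=\|\varphi_{0k}\|^{-2}$. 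Since $\psi_k\in{\mathcal{R}}(\Lambda)$ by \eqref{minidec01}, Lemma \ref{prore0}\,\eqref{prore0c} asserts that $\Lambda_{\lambda_0}$ is not of full support, so its support $\bigvee_{j\neq k}p_{\psi_j}\neq{\mathsf{1}}$, proving minimality.

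\textbf{Main obstacle.} The only non-routine step is the maximality of $\{\varphi_{0k}\}$ in the forward direction. I expect the cleanest route is the one sketched above, using the sandwich identity $\sqrt{\Lambda}\,E\,\sqrt{\Lambda}=\Lambda$ and exploiting that $\sqrt{\Lambda}$ has dense range, rather than trying to work directly with the (a priori unbounded) inverse of $\sqrt{\Lambda}$ or with orthogonality relations between the $\xi_k$'s, which need not span ${\mathcal{H}}$ by themselves.
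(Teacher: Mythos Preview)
Your proof is correct and follows essentially the same route as the paper: both directions rest on Lemma \ref{prore0} together with the factorization $p_{\psi_k}=\|\varphi_{0k}\|^2\sqrt{\Lambda}\,p_{\varphi_k}\sqrt{\Lambda}$, and your converse is identical to the paper's. The only cosmetic difference is in the forward direction: the paper first obtains the operator identity ${\mathsf 1}=\sum_k p_{\varphi_k}$ from the sandwich and reads off orthogonality \emph{and} completeness of $\{\varphi_k\}$ simultaneously from it, whereas you establish orthogonality separately via the kernel vectors $\xi_k$ of Lemma \ref{prore0}\,\eqref{prore0d} and then use the sandwich identity $\sqrt{\Lambda}E\sqrt{\Lambda}=\Lambda$ only for maximality---both arrangements are fine.
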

\begin{proof}
Assume \eqref{minidec0a}. Then, since $\Lambda-\lambda_k p_{\psi_k}\geq {\mathsf{0}}$ holds,  necessarily $\psi_k\in {\mathcal{R}}(\sqrt{\Lambda})$ is fulfilled, by Lemma \ref{prore0}. Thus, since $\Lambda$ is of full support, the vectors $\varphi_{0k}$ featuring in \eqref{minidec02} uniquely exist, and if $\varphi_k=\varphi_{0k}/\|\varphi_{0k}\|$, then according to \eqref{prore1}
$$p_{\psi_k}=\|\varphi_{0k}\|^2 \sqrt{\Lambda}\, p_{\varphi_k} \sqrt{\Lambda}$$
holds, for all $k$. Hence
$$\Lambda=\sum_{k=1}^\infty \lambda_k p_{\psi_k}=\sqrt{\Lambda} \biggl(\sum_{k=1}^\infty \lambda_k \|\varphi_{0k}\|^2  p_{\varphi_k}\biggr)\sqrt{\Lambda}$$
is obtained. That is,
\begin{equation}\label{1sum}
{\mathsf{1}}=\sum_{k=1}^\infty  \lambda_k \|\varphi_{0k}\|^2 p_{\varphi_k}
\end{equation}
has to be fulfilled.
On the other hand, since by assumption on $\{\psi_j\}$ and \eqref{minidec0a}  the positive operators
$\Lambda-\lambda_k p_{\psi_k}$ cannot be of full support, by Lemma \ref{prore0}\,\eqref{prore0b} necessarily $\lambda_k=\lambda_{0k}=\|\varphi_{0k}\|^{-2}$ follows. But then in view of Lemma \ref{prore0}\,\eqref{prore0c} even $\psi_k\in {\mathcal{R}}(\Lambda)$ has to be fulfilled, that is \eqref{minidec01} is seen, and \eqref{1sum} finally amounts to ${\mathsf{1}}=\sum_{k=1}^\infty  p_{\varphi_k}$. From this for each $j$ the relation  $1=\sum_{k=1}^\infty |\langle \varphi_k,\varphi_j\rangle |^2$ follows. Hence, $0=\sum_{k\not=j} |\langle \varphi_k,\varphi_j\rangle |^2$, for each $j\in {\mathbb{N}}$. That is, the normalized system $\{\varphi_k\}$ is an orthonormal system of vectors. Hence, $p_{\varphi_k}$ are mutually orthogonal minimal orthoprojections summing up to unity, which means that $\{\varphi_k\}$ is even complete in ${\mathcal{H}}$. This is the same as \eqref{minidec2}.

Now, suppose \eqref{minidec1} and \eqref{minidec2} are fulfilled, with positive trace-class operator $\Lambda$  of full support. Thus in view of \eqref{minidec1} and owing to ${\mathcal{R}}(\Lambda)\subset {\mathcal{R}}(\sqrt{\Lambda})$  the settings around $\varphi_{0k}$ in  \eqref{minidec2} make sense. Let $\varphi_k=\varphi_{0k}/\|\varphi_{0k}\|$, for all $k$. From \eqref{minidec1} and \eqref{minidec2} for all $k$ we get $$p_{\psi_k}=\|\varphi_{0k}\|^2 \sqrt{\Lambda}\, p_{\varphi_k} \sqrt{\Lambda}$$ On the other hand, since according to  \eqref{minidec2}  $\{\varphi_k\}$ is a complete orthonormal system in ${\mathcal{H}}$, then ${\mathsf{1}}=\sum_j p_{\varphi_j}$ holds. Taking together these relations is showing that
\begin{equation}\label{2sum}
\sum_{j=1}^\infty \|\varphi_{0j}\|^{-2} p_{\psi_j}=\sqrt{\Lambda}\biggl(\sum_j p_{\varphi_j}\biggr)\sqrt{\Lambda}=\Lambda
\end{equation}
Thus, by Lemma \ref{prore0}, $\lambda_{0k}=\|\varphi_{0k}\|^{-2}$ is the maximal $\lambda$ such that $\Lambda -\lambda p_{\psi_k}\geq {\mathsf{0}}$. But then \eqref{minidec1} and Lemma \ref{prore0}\,\eqref{prore0c} together imply that $\Lambda -\|\varphi_{0k}\|^{-2} p_{\psi_k}$ cannot be of full support. By \eqref{2sum}
the latter support equals $\bigvee_{j\not=k} p_{\psi_j}$. Hence $\bigvee_{j\not=k} p_{\psi_j}\not={\mathsf{1}}$, for each $k\in {\mathbb{N}}$. Thus, \eqref{minidec1} and \eqref{minidec2} together imply that \eqref{minidec0a} holds, with $\lambda_k=\|\varphi_{0k}\|^{-2}$, for each $k$.
\end{proof}
\begin{lemma}\label{hilfcomm}
Let $\Lambda$ be a positive compact operator of full support over a separable Hilbert space ${\mathcal{H}}$ with ${\mathsf{dim}}\,{\mathcal{H}}=\infty$. There exists a one-dimensional orthoprojection $p$ as follows:
\begin{enumerate}
  \item\label{hilfcomm1}
    Whenever $\{P_k\}$ is a strictly ascending, directed system of orthoprojections of finite rank with ${\mathsf{l.u.b.}} \{P_k\}={\mathsf{1}}$ and $\Lambda \smile P_k$, for all $k\in {\mathbb{N}}$, then there is $j$ with $p\not\smile P_j $.
  \item\label{hilfcomm2}
    There is a real $\beta<0$ such that for all $\lambda\in {{\mathbb{R}}}$ with $\lambda >\beta$ the compact operator $\Lambda+\lambda p$ is positive and has full support.
\end{enumerate}
\end{lemma}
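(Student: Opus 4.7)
The plan is to construct $p = p_\psi$ for a unit vector $\psi$ tailored to the spectral structure of $\Lambda$, and then invoke \textup{Lemma \ref{prore0}} to handle part \eqref{hilfcomm2}. Since $\Lambda$ is a positive compact operator of full support on an infinite-dimensional separable Hilbert space, its spectrum consists of countably many distinct positive eigenvalues $\lambda_1 > \lambda_2 > \cdots$ with $\lambda_k \downarrow 0$, carried by mutually orthogonal spectral orthoprojections $q_k$ of finite rank summing to ${\mathsf{1}}$, so that $\Lambda = \sum_k \lambda_k q_k$. I would choose strictly positive reals $\mu_k$ with $\sum_k \mu_k = 1$ and $\sum_k \mu_k/\lambda_k < \infty$ — for instance $\mu_k = Z^{-1}\lambda_k 2^{-k}$ with $Z = \sum_k \lambda_k 2^{-k} < \infty$ (finite because $\lambda_k \to 0$ is bounded) — pick unit vectors $\eta_k \in q_k {\mathcal H}$, and set
\[
\psi = \sum_{k=1}^\infty \sqrt{\mu_k}\,\eta_k, \qquad \varphi_0 = \sum_{k=1}^\infty \bigl(\sqrt{\mu_k}/\sqrt{\lambda_k}\bigr)\,\eta_k.
\]
Then $\psi$ is a unit vector with $q_k\psi \neq 0$ for every $k$, and $\psi = \sqrt\Lambda\,\varphi_0$ with $\varphi_0 \in {\mathcal H}$. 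Set $p = p_\psi$ and $\lambda_0 = \|\varphi_0\|^{-2} > 0$.

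For \eqref{hilfcomm1}, let $\{P_j\}$ be any strictly ascending system of finite-rank orthoprojections with $\mathsf{l.u.b.}\{P_j\} = {\mathsf{1}}$ and $\Lambda \smile P_j$. The commutation condition forces $P_j q_k = q_k P_j$ to be an orthoprojection dominated by $q_k$, and finite-rankness of $P_j$ restricts the non-zero $P_j q_k$ to a finite set of indices $F_j$. Hence $P_j\psi = \sum_{k \in F_j} (P_j q_k)(q_k\psi)$, whereas $\psi$ carries non-zero components $q_l\psi$ for every $l \notin F_j$; consequently $P_j\psi \neq \psi$ for every $j$. On the other hand, strong convergence $P_j \to {\mathsf{1}}$ gives $P_j\psi \to \psi \neq 0$, so $P_j\psi \neq 0$ for all sufficiently large $j$. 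A rank-one orthoprojection $p_\psi$ commutes with $P_j$ iff $P_j\psi \in \{0,\psi\}$, so $p \not\smile P_j$ for all sufficiently large $j$.

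For \eqref{hilfcomm2}, set $\beta = -\lambda_0 < 0$. For $\lambda \geq 0$, $\Lambda + \lambda p \geq \Lambda \geq {\mathsf{0}}$ is compact (sum of a compact and a finite-rank operator) and of full support, since a zero $\xi$ of $\langle(\Lambda + \lambda p)(\cdot),(\cdot)\rangle$ would force $\langle\Lambda\xi,\xi\rangle = 0$, hence $\xi \in \ker\Lambda = \{0\}$. For $\beta < \lambda < 0$, set $\lambda' = -\lambda \in (0,\lambda_0)$; \textup{Lemma \ref{prore0}}\,\eqref{prore0b} then yields $\Lambda + \lambda p = \Lambda - \lambda' p$ positive and of full support, and compactness follows as before.

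The delicate step is the construction of $\psi$: on the one hand one needs $\psi \in {\mathcal R}(\sqrt\Lambda)$ in order to invoke \textup{Lemma \ref{prore0}} with $\lambda_0 > 0$ (so that $\beta = -\lambda_0 < 0$ can be chosen), and on the other hand $q_k\psi \neq 0$ for infinitely many — in fact, all — $k$ in order to defeat commutation with every admissible $\{P_j\}$. These two demands coexist through the summability $\sum_k \mu_k/\lambda_k < \infty$ with every $\mu_k > 0$, which is feasible in spite of $\lambda_k \to 0$ because each individual $\lambda_k > 0$ and one is free to let $\mu_k$ decay geometrically fast enough to absorb the blow-up of $1/\lambda_k$.
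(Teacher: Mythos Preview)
Your proof is correct and follows essentially the same approach as the paper's: both construct a unit vector $\psi$ with $q_k\psi\neq 0$ for every spectral projection $q_k$ of $\Lambda$ and lying in ${\mathcal R}(\sqrt\Lambda)$, then argue non-commutation in part~\eqref{hilfcomm1} via the dichotomy $P_j\psi\in\{0,\psi\}$, and invoke Lemma~\ref{prore0} for part~\eqref{hilfcomm2}. The only difference is cosmetic: the paper simply picks any $\varphi_0$ with $q_k\varphi_0\neq 0$ for all $k$ and sets $\psi=\Lambda\varphi_0$ (normalized), which automatically lands in ${\mathcal R}(\Lambda)\subset{\mathcal R}(\sqrt\Lambda)$ without needing an explicit summability check, whereas you build $\psi$ via a series with $\sum_k\mu_k/\lambda_k<\infty$.
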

\begin{proof}
The spectral representation
of $\Lambda$ reads $\Lambda=\sum_k \lambda_k q_k$, where the $\lambda_k$'s are the strictly positive different proper values  and the $q_k$'s are the corresponding spectral projections. By assumptions about $\Lambda$, $\sum_k q_k={\mathsf{1}}$ and ${\mathsf{dim}}\,q_k {\mathcal{H}}<\infty$, for each $k$.  Remind that $\Lambda \smile c$ for $c\in {\mathsf{B}}({\mathcal{H}})$ if and only if $q_k\smile c$, for each $k$.   Let $\varphi_0\in {\mathcal{H}}$ be chosen with $q_k\varphi_0\not=0$, for all $k\in {\mathbb{N}}$, and such that
$\psi=\Lambda\,\varphi_0$ is a unit vector. Then, $q_k\psi\not={\mathsf{0}}$, for all $k\in {\mathbb{N}}$.
Let $p_\psi$ be the one-dimensional orthoprojection with $p_\psi \psi=\psi$.
To see \eqref{hilfcomm1}, let $\{P_k\}$ be as specified. We are going to show that $p=p_\psi$ can be taken.  First, note that $\Lambda \smile P_k$, for all $k$, is equivalent to $P_k \smile q_j $, for all $j,k\in {\mathbb{N}}$. Then, $P_k q_j$ is an orthoprojection of finite rank, and since $P_k$ is of finite rank, the set $I_k=\{j\in  {\mathbb{N}}: P_k q_j\not={\mathsf{0}}\}$ is a finite set of subscripts obeying
$ P_k\leq \sum_{j\in I_k} q_j=Q_k$.
Accordingly, $$\textstyle \|P_k \psi\|^2\leq \|Q_k \psi\|^2=\sum_{j\in I_k} \|q_j \psi\|^2<\sum_{j=1}^{\infty} \|q_j \psi\|^2=\|\psi\|^2=1$$ has to hold, for each $k$. On the other hand, ${\mathsf{l.u.b.}} \{P_k\}={\mathsf{1}}$ implies that $\lim_k \|P_k \psi\|^2=1$ must hold. Hence, there exists $j\in {\mathbb{N}}$ such that $0<\|P_j \psi\|^2 < 1$.  Especially this means that $\psi$ cannot be a proper vector of $P_j$. Hence, from $P_j p_\psi \psi=P_j\psi$
and $p_\psi P_j \psi=\langle P_j\psi,\psi\rangle \psi=\|P_j \psi\|^2 \psi $ we infer that
$(p_\psi P_j-P_j p_\psi)\psi=\|P_j \psi\|^2 \psi-P_j\psi\not={\mathsf{0}}$, that is, $p=p_\psi$ can be chosen in \eqref{hilfcomm1}.
Finally, owing to $\psi\in {\mathcal{R}}(\Lambda)$ and since $\Lambda$ is positive with full support, Lemma \ref{prore0} \eqref{prore0a}-\eqref{prore0b} implies
$\Lambda_\lambda=\Lambda-\lambda p_\psi > {\mathsf{0}}$, for all reals $\lambda$ with  $$\lambda<\|\sqrt{\Lambda}\,\varphi_0\|^{-2}=\langle \Lambda\varphi_0,\varphi_0\rangle$$
Hence $\Lambda+\lambda p_\psi$ is positive and of full support, for all reals $\lambda$ with $\lambda > -\langle \Lambda\varphi_0,\varphi_0\rangle^{-1}$.
Thus, from the previous we infer that if  $p=p_\psi$ is chosen, then
$\Lambda+\lambda p > {\mathsf{0}}$ has to be fulfilled for all reals $\lambda> \beta$, with $\beta=-1/\langle \Lambda\,\varphi_0,\varphi_0\rangle$. This proves \eqref{hilfcomm2}.
\end{proof}

\section{Additional remarks on structure of the operator $a_0$}\label{app_a1}
Under the suppositions of paragraph \ref{beimatrix4}, some further information around the structure of the symmetric positive linear operator $a_0$ as defined in \eqref{a0def} will be appended. As known by Lemma \ref{supp5full}, for given density operators of full support $\sigma_\nu$ and $\sigma_\varrho$ over ${\mathcal{H}}$ the hypothesis of the intermediate faithfulness will hold if, and only if, $a_0$ will prove to be essentially self-adjoint. Remind that whereas from  \ref{beimatrix4} and \ref{beimatrix5} a few sufficient conditions on $\sigma_\nu$ and $\sigma_\varrho$ providing essential self-adjoint $a_0$ are known, it is not clear whether or not counterexamples exist with non essential self-adjoint $a_0$. This problem is left open. Add some further insights into the structure of $a_0$ which might prove useful in that context.
Along with $a_0$ consider the linear operator $x_0$ given by
\begin{subequations}\label{a0map0}
\begin{equation}\label{arcbx}
x_0=\bigl|\sqrt{\sigma_\varrho}\sqrt{\sigma_\nu}\,\bigr|^{\frac{1}{2}}\sqrt{\sigma_\nu}^{-1}
\end{equation}
Since $\sqrt{\sigma_\nu}$ as a map is injective, the inverse $\sqrt{\sigma_\nu}^{-1}$ by standard results is densely defined on the range of  $\sqrt{\sigma_\nu}$, and is positive and self-adjoint there. Hence, also $x_0$ has ${\mathcal{D}}(x_0)=\sqrt{\sigma_\nu}{\mathcal{H}}$ as its natural domains of definition, and is acting there as
\begin{equation}\label{a0map1}
  x_0:\ {\mathcal{D}}(x_0)\ni \sqrt{\sigma_\nu}\varphi\longmapsto \,\bigl|\sqrt{\sigma_\varrho}\sqrt{\sigma_\nu}\,\bigr|^{\frac{1}{2}}\,\varphi\phantom{u^*\ }
\end{equation}
\end{subequations}
for each $\varphi\in {\mathcal{H}}$.
The operators $x_0$ of \eqref{a0map0} and $a_0$ are intimately related.
\begin{lemma}\label{a0x0}
The following properties hold:
\begin{enumerate}
  \item \label{a0x0a}
  $x_0$ is closable and obeying ${\mathcal{D}}(x_0^*x_0)={\mathcal{D}}(x_0)$;
  \item \label{a0x0b}
  $a_0=x_0^*x_0$;
  \item \label{a0x0c}
  $a_0$ and $x_0$ are injectively acting over ${\mathcal{D}}(x_0)={\mathcal{D}}(a_0)$.
\end{enumerate}
\end{lemma}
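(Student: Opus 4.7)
The plan is to compute $x_0^*$ explicitly, then verify the operator identity $a_0=x_0^*x_0$ directly on $\mathcal{D}(x_0)$, and finally read off closability and injectivity as corollaries. Since $|\sqrt{\sigma_\varrho}\sqrt{\sigma_\nu}|^{1/2}$ is bounded self-adjoint and everywhere defined, while $\sqrt{\sigma_\nu}^{-1}$ is a densely defined self-adjoint operator on $\mathcal{R}(\sqrt{\sigma_\nu})$, the rule $(BS)^*=S^*B^*$ for a bounded $B$ applies and yields
\[
x_0^*=\sqrt{\sigma_\nu}^{-1}\,\bigl|\sqrt{\sigma_\varrho}\sqrt{\sigma_\nu}\,\bigr|^{1/2},\qquad
\mathcal{D}(x_0^*)=\bigl\{y\in\mathcal{H}:\bigl|\sqrt{\sigma_\varrho}\sqrt{\sigma_\nu}\,\bigr|^{1/2}y\in \mathcal{R}(\sqrt{\sigma_\nu})\bigr\}.
\]
This identification is the one step that requires the unbounded-operator calculus to be handled carefully, but it is standard for the product of a bounded and a densely defined self-adjoint factor.

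The heart of the argument is the computation of $x_0^*x_0$ on $\mathcal{D}(x_0)=\sqrt{\sigma_\nu}\mathcal{H}$. For $\psi=\sqrt{\sigma_\nu}\varphi$ one has $x_0\psi=|\sqrt{\sigma_\varrho}\sqrt{\sigma_\nu}\,|^{1/2}\varphi$, and applying $x_0^*$ would require $|\sqrt{\sigma_\varrho}\sqrt{\sigma_\nu}|\varphi\in\mathcal{R}(\sqrt{\sigma_\nu})$. Here the key identity \eqref{a0x01} does all the work: since $|\sqrt{\sigma_\varrho}\sqrt{\sigma_\nu}\,|=\sqrt{\sigma_\nu}\sqrt{\sigma_\varrho}\,u^*$, the vector $|\sqrt{\sigma_\varrho}\sqrt{\sigma_\nu}|\varphi=\sqrt{\sigma_\nu}(\sqrt{\sigma_\varrho}u^*\varphi)$ sits inside $\mathcal{R}(\sqrt{\sigma_\nu})$ for every $\varphi\in\mathcal{H}$. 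Consequently $x_0\psi\in\mathcal{D}(x_0^*)$ for every $\psi\in\mathcal{D}(x_0)$, proving the inclusion $\mathcal{D}(x_0)\subseteq\mathcal{D}(x_0^*x_0)$, and
\[
x_0^*x_0\psi=\sqrt{\sigma_\nu}^{-1}\bigl(\sqrt{\sigma_\nu}\sqrt{\sigma_\varrho}\,u^*\varphi\bigr)=\sqrt{\sigma_\varrho}\,u^*\varphi=a_0\psi.
\]
The reverse inclusion $\mathcal{D}(x_0^*x_0)\subseteq\mathcal{D}(x_0)$ being trivial, this establishes the domain equality and the identity $a_0=x_0^*x_0$ at the same time, which is \eqref{a0x0b} and the second half of \eqref{a0x0a}.

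Closability of $x_0$ then reduces to showing that $\mathcal{D}(x_0^*)$ is dense. The computation above displays the explicit dense subset: $\mathcal{D}(x_0^*)$ contains $\mathcal{R}(|\sqrt{\sigma_\varrho}\sqrt{\sigma_\nu}\,|^{1/2})$, whose closure equals $s(|\sqrt{\sigma_\varrho}\sqrt{\sigma_\nu}\,|)\mathcal{H}=\mathcal{H}$ because $\sqrt{\sigma_\varrho}\sqrt{\sigma_\nu}$ is injective (both factors being injective, as $\sigma_\nu,\sigma_\varrho$ have full support), hence so is $|\sqrt{\sigma_\varrho}\sqrt{\sigma_\nu}\,|$. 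This finishes \eqref{a0x0a}.

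For injectivity \eqref{a0x0c}, $x_0\psi=0$ with $\psi=\sqrt{\sigma_\nu}\varphi$ forces $|\sqrt{\sigma_\varrho}\sqrt{\sigma_\nu}\,|^{1/2}\varphi=0$, and by the injectivity just invoked $\varphi=0$, hence $\psi=0$. Injectivity of $a_0$ then follows either by the same bookkeeping—$a_0\psi=0$ combined with $a_0=x_0^*x_0$ gives $\|x_0\psi\|^2=\langle a_0\psi,\psi\rangle=0$—or directly from $a_0\psi=\sqrt{\sigma_\varrho}u^*\varphi$ together with injectivity of $\sqrt{\sigma_\varrho}$ and of the co-isometry $u^*$ (since $uu^*=\mathsf{1}$, both ranges of $\sqrt{\sigma_\nu}\sqrt{\sigma_\varrho}$ and of $\sqrt{\sigma_\varrho}\sqrt{\sigma_\nu}$ being dense). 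The only genuine obstacle is making sure the unbounded-operator identifications are clean, in particular that the formal manipulations $(BS)^*=S^*B^*$ and $\sqrt{\sigma_\nu}^{-1}\sqrt{\sigma_\nu}=\mathsf{id}$ on $\mathcal{R}(\sqrt{\sigma_\nu})$ are legitimate in the precise sense required; everything else is bookkeeping around the identity \eqref{a0x01}.
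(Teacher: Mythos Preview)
Your proof is correct and follows essentially the same route as the paper: compute $x_0^*$ via the standard identity $(BS)^*=S^*B^*$ for bounded $B$, use the key relation \eqref{a0x01} to show $\mathcal{R}(x_0)\subset\mathcal{D}(x_0^*)$ and hence $a_0=x_0^*x_0$, and deduce injectivity from the injectivity of $|\sqrt{\sigma_\varrho}\sqrt{\sigma_\nu}|^{1/2}$ and of $\sqrt{\sigma_\varrho}\,u^*$. The only minor difference is that the paper verifies closability directly from the sequential definition (if $\sqrt{\sigma_\nu}\varphi_n\to 0$ and $x_0\sqrt{\sigma_\nu}\varphi_n\to\eta$ then $\eta=0$), whereas you infer it from the density of $\mathcal{D}(x_0^*)\supset\mathcal{R}(|\sqrt{\sigma_\varrho}\sqrt{\sigma_\nu}|^{1/2})$; both arguments are equivalent here.
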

\begin{proof}
Let $\{\sqrt{\sigma_\nu}\varphi_n\}\subset {\mathcal{D}}(x_0)$ with  $\lim_{n\to\infty} \sqrt{\sigma_\nu}\varphi_n={\mathsf{0}}$ and such that $\{x_0\sqrt{\sigma_\nu}\varphi_n\}$ converges,    $\lim_{n\to\infty} x_0\sqrt{\sigma_\nu}\varphi_n=\eta$. The closure $\bar{x}_0$ of $x_0$  exists if, and only if, $\eta={\mathsf{0}}$ in each such situation.
To see this, note that by assumption and \eqref{arcbx}, $$\lim_{n\to\infty} \bigl|\sqrt{\sigma_\varrho}\sqrt{\sigma_\nu}\,\bigr|^{\frac{1}{2}}\varphi_n=
\eta$$
But then
$\lim_{n\to\infty} \bigl|\sqrt{\sigma_\varrho}\sqrt{\sigma_\nu}\,\bigr|\varphi_n= \bigl|\sqrt{\sigma_\varrho}\sqrt{\sigma_\nu}\,\bigr|^{\frac{1}{2}}\eta$.
Especially, from this in view of \eqref{a0x01} we see that
$\bigl|\sqrt{\sigma_\varrho}\sqrt{\sigma_\nu}\,\bigr|\varphi_n=u\sqrt{\sigma_\varrho}\bigl(\sqrt{\sigma_\nu}\varphi_n\bigr)$ holds, for each $n\in {\mathbb{N}}$, and thus $$\bigl|\sqrt{\sigma_\varrho}\sqrt{\sigma_\nu}\,\bigr|^{\frac{1}{2}}\eta=\lim_{n\to\infty} \bigl|\sqrt{\sigma_\varrho}\sqrt{\sigma_\nu}\,\bigr|\varphi_n=u\sqrt{\sigma_\varrho}\lim_{n\to\infty} \sqrt{\sigma_\nu}\varphi_n={\mathsf{0}}$$ follows.
Thus, since as a map $\bigl|\sqrt{\sigma_\varrho}\sqrt{\sigma_\nu}\,\bigr|^{\frac{1}{2}}$ is injective,   $\eta={\mathsf{0}}$ follows.
On the other hand, $\psi=\sqrt{\sigma_\nu}\varphi$ with $\varphi\in {\mathcal{H}}$ by \eqref{a0map1} and \eqref{a0x01} implies  $$\bigl|\sqrt{\sigma_\varrho}\sqrt{\sigma_\nu}\,\bigr|^{\frac{1}{2}} x_0 \psi=\bigl|\sqrt{\sigma_\varrho}\sqrt{\sigma_\nu}\,\bigr| \varphi=\sqrt{\sigma_\nu}\sqrt{\sigma_\varrho}u^*\varphi$$
Hence we have
\begin{equation}\label{a0x01a}
\begin{split}
\bigl|\sqrt{\sigma_\varrho}\sqrt{\sigma_\nu}\,\bigr|^{\frac{1}{2}}{\mathcal{R}}(x_0)
={\mathcal{R}}\bigl(\bigl|\sqrt{\sigma_\varrho}\sqrt{\sigma_\nu}\,\bigr|^{\frac{1}{2}} x_0\bigr)=
{\mathcal{R}}\bigl(\bigl|\sqrt{\sigma_\varrho}\sqrt{\sigma_\nu}\,\bigr|\bigr)
\subset\\
\subset {\mathcal{R}}\bigl(\sqrt{\sigma_\nu}\bigr)={\mathcal{D}}(x_0)
\end{split}
\end{equation}
By standard results, as a consequence of  \eqref{arcbx} we see that
\begin{subequations}\label{x0adjoint}
\begin{eqnarray}
\label{x0adjoint1}
  x_0^* &=& \sqrt{\sigma_\nu}^{-1}\bigl|\sqrt{\sigma_\varrho}\sqrt{\sigma_\nu}\,\bigr|^{\frac{1}{2}} \\
  \label{x0adjoint2}
  {\mathcal{D}}(x_0^*) &=& \bigl\{\varphi\in {\mathcal{H}}: \bigl|\sqrt{\sigma_\varrho}\sqrt{\sigma_\nu}\,\bigr|^{\frac{1}{2}}\varphi\in {\mathcal{D}}(x_0)\bigr\}
\end{eqnarray}
\end{subequations}
Thus, in view of \eqref{x0adjoint2}, \eqref{a0x01a} implies the following inclusion relation to be true
\begin{equation}\label{grund}
{\mathcal{R}}(x_0)\subset {\mathcal{D}}(x_0^*)
\end{equation}
and therefore
${\mathcal{D}}(x_0^*x_0)=\{\varphi\in {\mathcal{D}}(x_0)  : x_0\varphi\in {\mathcal{D}}(x_0^*)\}={\mathcal{D}}(x_0)$ is seen. Thus \eqref{a0x0a} is true.
To see \eqref{a0x0b}, note that since owing to ${\mathcal{D}}(x_0)={\mathcal{R}}(\sqrt{\sigma_\nu})$,  $\psi=\sqrt{\sigma_\nu}\varphi$ is running through all of $ {\mathcal{D}}(x_0)$ if $\varphi$ is running through all of ${\mathcal{H}}$, in making use of  \eqref{grund}, \eqref{a0x01} and \eqref{a0mapneu}  we may conclude as follows
\begin{eqnarray*}
  x_0^*x_0 \psi &=& x_0^*(x_0\psi)=\sqrt{\sigma_\nu}^{-1}\bigl|\sqrt{\sigma_\varrho}\sqrt{\sigma_\nu}\,\bigr|^{\frac{1}{2}}\bigl(\bigl|\sqrt{\sigma_\varrho}\sqrt{\sigma_\nu}\,\bigr|^{\frac{1}{2}}\varphi\bigr)=\sqrt{\sigma_\nu}^{-1}
\bigl(\bigl|\sqrt{\sigma_\varrho}\sqrt{\sigma_\nu}\,\bigr|\varphi\bigr)\\ &=& \sqrt{\sigma_\nu}^{-1}
\bigl(\sqrt{\sigma_\nu}\sqrt{\sigma_\varrho}u^*\varphi\bigr)=\sqrt{\sigma_\varrho}u^*\varphi
   = a_0 \psi
\end{eqnarray*}
for all $\psi\in {\mathcal{D}}(x_0)={\mathcal{D}}(a_0)$. That is, $x_0^*x_0=a_0$ is fulfilled. Thus \eqref{a0x0b} holds.

Finally, to see \eqref{a0x0c}, assume $a_0\xi={\mathsf{0}}$, with $\xi=\sqrt{\sigma_\nu}\varphi$. In view of \eqref{essa3neu} this equivalently means that $\sqrt{\sigma_\varrho}u^*\varphi={\mathsf{0}}$. Since as a map $\sqrt{\sigma_\varrho}$
is acting injectively, $u^*\varphi={\mathsf{0}}$ follows. Thus, by unitarity of $u$,  $\varphi={\mathsf{0}}$. Hence, $\xi={\mathsf{0}}$. Similarly, since as a map also $$\bigl|\sqrt{\sigma_\varrho}\sqrt{\sigma_\nu}\,\bigr|^{\frac{1}{2}}$$ is injective, in view of \eqref{a0map1} we infer that $x_0\xi={\mathsf{0}}$ implies $\varphi={\mathsf{0}}$, and therefore $\xi={\mathsf{0}}$. That is, neither $a_0$ nor $x_0$ admit $0$ as proper value, which is equivalent with injectivity of the mappings \eqref{a0mapneu} and \eqref{a0map1}.
\end{proof}
\noindent Note that $a_0$ as a positive symmetric operator has to be closable, and therefore at least one positive self-adjoint extension of $a_0$ has to exist. A known procedure providing such a self-adjoint extension $a_F\supset a_0$ is by Friedrich's extension, which for $a_0$ reads
\begin{subequations}\label{fried0}
\begin{eqnarray}
\label{fried0a}
  {\mathcal{D}}(a_F) &=& \{\xi\in {\mathcal{D}}(a_0^*): \exists \{\xi_n\}\subset {\mathcal{D}}(a_0), \xi_n\to \xi, \\ \nonumber
  & & \phantom{\{\xi\in {\mathcal{D}}(a_0^*): \exists \{\xi_n\}\subset {\mathcal{D}}(a_0)}\langle a_0 (\xi_n-\xi_m),\xi_n-\xi_m\rangle \to 0\} \\
  \label{fried0b}
  a_F &=& a_0^*| {\mathcal{D}}(a_F)
\end{eqnarray}
\end{subequations}
Note that according to the formula for $a_0$ given by Lemma \ref{a0x0}\,\eqref{a0x0b} and by \eqref{grund} for each Cauchy-sequence $\{\xi_n\}\subset {\mathcal{D}}(a_0)$ with $\lim_{n\to\infty}\xi_n=\xi$ and $\langle a_0 (\xi_n-\xi_m),\xi_n-\xi_m\rangle \to 0$ the latter equivalently means that $\|x_0 \xi_n -x_0\xi_m\|\to 0$, that is, $\{x_0\xi_n\}$ is a Cauchy-sequence. Hence, we have $\xi\in {\mathcal{D}}(\bar{x}_0)$ with $\lim_{n\to\infty} x_0\xi_n=\bar{x}_0\xi$. Therefore, \eqref{fried0a} is the same as
\begin{equation}\label{fried1}
 {\mathcal{D}}(a_F)={\mathcal{D}}(a_0^*)\cap {\mathcal{D}}(\bar{x}_0)
\end{equation}
Now, in view of \eqref{a0x01},  $u\sqrt{\sigma_\varrho}\psi=\bigl|\sqrt{\sigma_\varrho}\sqrt{\sigma_\nu}\,\bigr|^{\frac{1}{2}} x_0 \psi$ holds, for each $\psi\in {\mathcal{D}}(x_0)$. From this for $\xi\in {\mathcal{D}}(\bar{x}_0)$, with $\{\psi_n\}\subset {\mathcal{D}}(x_0)$ and $\psi_n\to \xi$ and $x_0\psi_n\to \bar{x}_0\xi$, the relation
$$u \sqrt{\sigma_\varrho}\xi=\bigl|\sqrt{\sigma_\varrho}\sqrt{\sigma_\nu}\,\bigr|^{\frac{1}{2}} \bar{x}_0 \xi$$
follows. Accordingly, \eqref{essa4bneu} and \eqref{fried1} yield that
\begin{equation}\label{af}
{\mathcal{D}}(a_F)=\bigl\{\xi\in {\mathcal{D}}(\bar{x}_0): \bigl|\sqrt{\sigma_\varrho}\sqrt{\sigma_\nu}\,\bigr|^{\frac{1}{2}} \bar{x}_0 \xi \in {\mathcal{D}}(a_0)  \bigr\}
\end{equation}
On the other hand, since $x_0$ is densely defined and closable, $x_0^*=\bar{x}_0^*$ holds. Accordingly, $x_0^*\bar{x}_0=\bar{x}_0^*\bar{x}_0$ implies that $x_0^*\bar{x}_0$ is a positive self-adjoint linear operator, the domain of definition of which according to \eqref{x0adjoint2} is given by
\begin{equation}\label{af0}
{\mathcal{D}}(x_0^*\bar{x}_0)=\bigl\{\xi\in {\mathcal{D}}(\bar{x}_0): \bigl|\sqrt{\sigma_\varrho}\sqrt{\sigma_\nu}\,\bigr|^{\frac{1}{2}} \bar{x}_0\xi\in {\mathcal{D}}(x_0)\bigr\}
\end{equation}
Thus, in comparing \eqref{af} with \eqref{af0} and keeping in mind that ${\mathcal{D}}(a_0)={\mathcal{D}}(x_0)$ holds, the conclusion is ${\mathcal{D}}(a_F)={\mathcal{D}}(x_0^*\bar{x}_0)$. But then, since from $a_0=x_0^*x_0\subset x_0^*\bar{x}_0$ and self-adjointness of $x_0^*\bar{x}_0$, $a_0^*\supset x_0^*\bar{x}_0$ follows, from the latter in restriction to ${\mathcal{D}}(a_F)$ the relation $a_F=a_0^*|{\mathcal{D}}(a_F)= x_0^*\bar{x}_0$ is obtained. This together with some other information
relating the closures $\bar{x}_0$ and $\bar{a}_0$ is summarized in the following.
\begin{corolla}\label{closa0}
Let $\sigma_\nu, \sigma_\varrho$ be of full rank, and $a_0, x_0$ defined as in \eqref{essa3neu},  \eqref{a0map0}. Then,
\begin{enumerate}
  \item\label{closa03}
   $a_F=x_0^*\bar{x}_0$ is the Friedrich's extension of $a_0$;
  \item\label{closa02}
  $a_0$ is essentially self-adjoint if, and only if, $\bar{a}_0=x_0^*\bar{x}_0$;
  \item\label{closa01}
  $a_0$ is self-adjoint if, and only if, ${\mathcal{D}}(x_0^*\bar{x}_0)={\mathcal{D}}(x_0)$.
\end{enumerate}
The condition ${\mathcal{D}}(x_0^*\bar{x}_0)={\mathcal{D}}(x_0)$ is equivalent to injectivity of the map $\bar{x}_0: {\mathcal{D}}(\bar{x}_0)\rightarrow {\mathcal{R}}(\bar{x}_0)$, with the range obeying ${\mathcal{R}}(\bar{x}_0)\cap {\mathcal{D}}(x_0^*)={\mathcal{R}}(x_0)$.
\end{corolla}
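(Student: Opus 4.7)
The plan is to dispose of the three itemized claims with essentially the material already assembled just before the statement, and then to spend the real work on the equivalent characterization of ${\mathcal D}(x_0^*\bar x_0)={\mathcal D}(x_0)$.

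For \eqref{closa03}, I would observe that the text immediately preceding the corollary has already done the job: \eqref{fried1} recasts the Friedrichs domain as ${\mathcal D}(a_F)={\mathcal D}(a_0^*)\cap{\mathcal D}(\bar x_0)$, \eqref{af} identifies this with $\{\xi\in{\mathcal D}(\bar x_0):|\sqrt{\sigma_\varrho}\sqrt{\sigma_\nu}|^{1/2}\bar x_0\xi\in{\mathcal D}(a_0)\}$, and \eqref{af0} recognises the right-hand side as ${\mathcal D}(x_0^*\bar x_0)$ in view of ${\mathcal D}(a_0)={\mathcal D}(x_0)$. Since $x_0$ is closable (Lemma \ref{a0x0}\,\eqref{a0x0a}), $x_0^*=\bar x_0^*$, so $x_0^*\bar x_0=\bar x_0^*\bar x_0$ is positive and self-adjoint; as it extends the symmetric $a_0$, it is a restriction of $a_0^*$, and its restriction to ${\mathcal D}(a_F)$ is therefore $a_F$ itself.

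For \eqref{closa02} and \eqref{closa01} I would invoke standard symmetric-operator theory. Essential self-adjointness of $a_0$ is equivalent to $\bar a_0$ being self-adjoint; since a self-adjoint operator has no proper symmetric extension, this forces $\bar a_0=a_F=x_0^*\bar x_0$, and the converse is immediate. Likewise, $a_0$ is self-adjoint iff $a_0=\bar a_0=a_F$, which via \eqref{closa03} is precisely ${\mathcal D}(x_0)={\mathcal D}(a_0)={\mathcal D}(x_0^*\bar x_0)$; conversely, if ${\mathcal D}(x_0^*\bar x_0)={\mathcal D}(x_0)$, then $a_0=x_0^*x_0\subseteq x_0^*\bar x_0$ combined with domain equality forces $a_0=x_0^*\bar x_0$, hence $a_0$ is self-adjoint.

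The bulk of the work lies in the concluding equivalence. I would unfold
\[
{\mathcal D}(x_0^*\bar x_0)=\bigl\{\xi\in{\mathcal D}(\bar x_0):\bar x_0\xi\in{\mathcal D}(x_0^*)\bigr\},
\]
note that ${\mathcal R}(x_0)\subseteq{\mathcal D}(x_0^*)$ by \eqref{grund} gives the automatic inclusion ${\mathcal D}(x_0)\subseteq{\mathcal D}(x_0^*\bar x_0)$, and then show that equality is equivalent to injectivity of $\bar x_0$ together with ${\mathcal R}(\bar x_0)\cap{\mathcal D}(x_0^*)={\mathcal R}(x_0)$. For sufficiency, given $\xi\in{\mathcal D}(\bar x_0)$ with $\bar x_0\xi\in{\mathcal D}(x_0^*)$, the range condition supplies $\eta\in{\mathcal D}(x_0)$ with $x_0\eta=\bar x_0\xi$, and injectivity of $\bar x_0$ (noting ${\mathcal D}(x_0)\subseteq{\mathcal D}(\bar x_0)$ and $x_0\eta=\bar x_0\eta$) forces $\xi=\eta\in{\mathcal D}(x_0)$. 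For necessity, any $\xi\in{\mathcal D}(\bar x_0)$ with $\bar x_0\xi\in{\mathcal D}(x_0^*)$ already lies in ${\mathcal D}(x_0)$, so $\bar x_0\xi=x_0\xi\in{\mathcal R}(x_0)$, giving one inclusion of the range condition (the other being \eqref{grund}); and any $\xi\in\ker\bar x_0$ satisfies $\bar x_0\xi=0\in{\mathcal D}(x_0^*)$, hence $\xi\in{\mathcal D}(x_0)$, where the kernel is trivial by Lemma \ref{a0x0}\,\eqref{a0x0c}, yielding injectivity of $\bar x_0$.

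The main obstacle I anticipate is purely bookkeeping in this last step: one has to verify that \emph{both} the range identification and the kernel triviality of $\bar x_0$ are genuinely needed, and that each can be extracted from the single scalar assumption of domain equality — in particular that injectivity of $\bar x_0$ (on its whole domain, not just on ${\mathcal D}(x_0)$) really drops out of the kernel argument above rather than being smuggled in tacitly from Lemma \ref{a0x0}\,\eqref{a0x0c}.
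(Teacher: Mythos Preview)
Your proposal is correct and follows essentially the same approach as the paper. The only cosmetic difference is in the final equivalence: the paper partitions ${\mathcal R}(\bar x_0)$ as ${\mathcal R}(x_0)\cup\{\bar x_0\varphi:\varphi\in{\mathcal D}(\bar x_0)\setminus{\mathcal D}(x_0)\}$ and argues with this decomposition, whereas you do the same thing by direct element-chasing; for the injectivity of $\bar x_0$ the paper routes through $a_0\xi=0$ and Lemma~\ref{a0x0}\,\eqref{a0x0c}, while you route through $x_0\xi=0$ and the same lemma --- identical content either way.
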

\begin{proof}
The validity of \eqref{closa03} can be taken for granted by the previously given derivation.
Also, if $\bar{a}_0=x_0^*\bar{x}_0$ is supposed, then $\bar{a}_0$ is self-adjoint, that is, $a_0$ is essentially self-adjoint. On the other hand, by \eqref{closa03} $a_F$ is  a positive self-adjoint linear operator which is extending $a_0=x_0^*x_0$, see Lemma \ref{a0x0}\,\eqref{a0x0b}. That is $a_0\subset a_F$ holds. From this $\bar{a}_0\subset a_F$ follows. Hence, if $\bar{a}_0$  is supposed to be self-adjoint, then since each self-adjoint linear operator is maximally symmetric, $\bar{a}_0=a_F$ follows,  that is $\bar{a}_0=x_0^*\bar{x}_0$. Thus \eqref{closa02} is true.

Relating \eqref{closa01}, note that if $a_0$ is self-adjoint, then $a_0$ is closed and thus $a_0=x_0^*\bar{x}_0$ by \eqref{closa02}. Hence, ${\mathcal{D}}(x_0^*\bar{x}_0)={\mathcal{D}}(a_0)={\mathcal{D}}(x_0)$ follows.
On the other hand, if ${\mathcal{D}}(x_0^*\bar{x}_0)={\mathcal{D}}(x_0)$ is supposed, this obviously implies that
$x_0^*x_0=x_0^*\bar{x}_0$ holds. Hence, application of Lemma \ref{a0x0}\,\eqref{a0x0b} yields that $a_0=x_0^*\bar{x}_0=\bar{x}_0^*\bar{x}_0$. Thus $a_0$ is self-adjoint.

Suppose ${\mathcal{D}}(x_0^*\bar{x}_0)={\mathcal{D}}(x_0)$. By \eqref{closa01}, if $\bar{x}_0\xi={\mathsf{0}}$ is supposed to hold, with $\xi\in {\mathcal{D}}(\bar{x}_0)$, then also $\xi\in {\mathcal{D}}(x_0)$ with $a_0\xi={\mathsf{0}}$. Thus, since according to Lemma \ref{a0x0}\,\eqref{a0x0c}, $a_0$ is acting injectively, $\xi={\mathsf{0}}$ must hold.
That is, $\bar{x}_0$ maps injectively.
If as a map $\bar{x}_0$ is injective, then
${\mathcal{R}}(\bar{x}_0) = {\mathcal{R}}(x_0)\cup \{\bar{x}_0\varphi: \varphi\in {\mathcal{D}}(\bar{x}_0)\backslash {\mathcal{D}}(x_0)\}$ and
  ${\mathcal{R}}(x_0)\cap \{\bar{x}_0\varphi: \varphi\in {\mathcal{D}}(\bar{x}_0)\backslash {\mathcal{D}}(x_0)\}=\emptyset$ hold.
Note that $\{\bar{x}_0\varphi: \varphi\in {\mathcal{D}}(\bar{x}_0)\backslash {\mathcal{D}}(x_0)\}\cap {\mathcal{D}}(x_0^*)\not=\emptyset$ would contradict
${\mathcal{D}}(x_0^*\bar{x}_0)={\mathcal{D}}(x_0)$. Hence, the conclusion is ${\mathcal{R}}(\bar{x}_0)\cap {\mathcal{D}}(x_0^*)={\mathcal{R}}(x_0)\cap {\mathcal{D}}(x_0^*)$, from which  by \eqref{grund}
${\mathcal{R}}(\bar{x}_0)\cap {\mathcal{D}}(x_0^*)={\mathcal{R}}(x_0)$ can be followed.
On the other hand, if $\bar{x}_0$ is supposed to be injective with ${\mathcal{R}}(\bar{x}_0)\cap {\mathcal{D}}(x_0^*)={\mathcal{R}}(x_0)$, from the latter in view of the above consequence of injectivity of $\bar{x}_0$ one has to conclude that $\{\bar{x}_0\varphi: \varphi\in {\mathcal{D}}(\bar{x}_0)\backslash {\mathcal{D}}(x_0)\}\cap {\mathcal{D}}(x_0^*)=\emptyset$ must hold. In view of  \eqref{grund} again this is the same as saying that
${\mathcal{D}}(x_0^*\bar{x}_0)=\{\varphi\in {\mathcal{D}}(\bar{x}_0): \bar{x}_0\varphi\in {\mathcal{D}}(x_0^*)\}= \{\varphi\in {\mathcal{D}}(x_0): \bar{x}_0\varphi\in {\mathcal{D}}(x_0^*)\}={\mathcal{D}}(x_0)$.
\end{proof}
In summary, relating essential self-adjointness of $a_0$,  we may state the following.
\begin{corolla}\label{essa}
For faithful normal states $\nu$,\,$\varrho$ the following are  equivalent:
\begin{enumerate}
  \item\label{essa1} $a_0$ is essentially self-adjoint;
  \item\label{essa3} $a_0^*=x_0^*\bar{x}_0$;
  \item\label{essa2} ${\mathcal{D}}(a_0^*)\subset{\mathcal{D}}(\bar{x}_0)$.
\end{enumerate}
\end{corolla}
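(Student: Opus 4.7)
The plan is to show the cyclic chain \eqref{essa1} $\Rightarrow$ \eqref{essa3} $\Rightarrow$ \eqref{essa2} $\Rightarrow$ \eqref{essa3} $\Rightarrow$ \eqref{essa1}, leaning heavily on the structural facts about $a_0$, $x_0$ already assembled in Lemma \ref{a0x0} and Corollary \ref{closa0}. Throughout I shall use $a_0=x_0^*x_0$ (with ${\mathcal{D}}(a_0)={\mathcal{D}}(x_0)$) from Lemma \ref{a0x0}\,\eqref{a0x0b}, the inclusion ${\mathcal{R}}(x_0)\subset{\mathcal{D}}(x_0^*)$ from \eqref{grund}, and the identity $\bar{a}_0=x_0^*\bar{x}_0$ whenever $a_0$ is essentially self-adjoint, which is already recorded in Corollary \ref{closa0}\,\eqref{closa02}.

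For \eqref{essa1}$\Rightarrow$\eqref{essa3}, essential self-adjointness of $a_0$ means $a_0^*=\bar{a}_0$, so Corollary \ref{closa0}\,\eqref{closa02} immediately yields $a_0^*=x_0^*\bar{x}_0$. Conversely, for \eqref{essa3}$\Rightarrow$\eqref{essa1}, the operator $x_0^*\bar{x}_0=\bar{x}_0^*\bar{x}_0$ is self-adjoint by the von~Neumann theorem on $T^*T$, so if $a_0^*=x_0^*\bar{x}_0$ then $a_0^*$ is self-adjoint; this forces $a_0$ to be essentially self-adjoint. The implication \eqref{essa3}$\Rightarrow$\eqref{essa2} is then trivial: $\mathcal{D}(a_0^*)=\mathcal{D}(x_0^*\bar{x}_0)\subset\mathcal{D}(\bar{x}_0)$.

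The substantive step is \eqref{essa2}$\Rightarrow$\eqref{essa3}, and I anticipate that it will be the main obstacle. One inclusion is automatic: from $a_0=x_0^*x_0\subset x_0^*\bar{x}_0$ and the self-adjointness of $x_0^*\bar{x}_0$ we get $a_0^*\supset(x_0^*\bar{x}_0)^*=x_0^*\bar{x}_0$. For the reverse inclusion, I would take $\eta\in\mathcal{D}(a_0^*)$ — which by the hypothesis \eqref{essa2} also lies in $\mathcal{D}(\bar{x}_0)$ — and compute, for every $\psi\in\mathcal{D}(a_0)=\mathcal{D}(x_0)$,
\begin{equation*}
\langle a_0^*\eta,\psi\rangle=\langle\eta,a_0\psi\rangle=\langle\eta,x_0^*x_0\psi\rangle=\langle\bar{x}_0\eta,x_0\psi\rangle,
\end{equation*}
where the last equality uses $x_0\psi\in\mathcal{R}(x_0)\subset\mathcal{D}(x_0^*)$ (by \eqref{grund}) together with the fact that $(x_0^*)^*=\bar{x}_0$ for the densely defined closable operator $x_0$. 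The resulting map $\psi\mapsto\langle\bar{x}_0\eta,x_0\psi\rangle$ is therefore continuous on $\mathcal{D}(x_0)$, so by the defining property of the adjoint $x_0^*$ we conclude $\bar{x}_0\eta\in\mathcal{D}(x_0^*)$ with $x_0^*\bar{x}_0\eta=a_0^*\eta$. This places $\eta$ in $\mathcal{D}(x_0^*\bar{x}_0)$ and settles $a_0^*\subset x_0^*\bar{x}_0$.

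The potential pitfall here is to justify cleanly the adjoint manipulation $\langle\eta,x_0^*(x_0\psi)\rangle=\langle\bar{x}_0\eta,x_0\psi\rangle$, which requires both $\eta\in\mathcal{D}(\bar{x}_0)$ (guaranteed by \eqref{essa2}) and $x_0\psi\in\mathcal{D}(x_0^*)$ (guaranteed by \eqref{grund}); both hypotheses are genuinely needed, and it is exactly here that the assumption $\mathcal{D}(a_0^*)\subset\mathcal{D}(\bar{x}_0)$ is spent. Once this step is past, the closure of the chain \eqref{essa2}$\Rightarrow$\eqref{essa3}$\Rightarrow$\eqref{essa1}$\Rightarrow$\eqref{essa3}$\Rightarrow$\eqref{essa2} is complete and the three conditions are mutually equivalent.
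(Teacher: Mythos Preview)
Your proof is correct. The overall structure matches the paper's cyclic scheme, and your handling of \eqref{essa1}$\Leftrightarrow$\eqref{essa3} and \eqref{essa3}$\Rightarrow$\eqref{essa2} is essentially identical to what the paper does.

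The one place where you diverge slightly is in closing the loop from \eqref{essa2}. The paper argues \eqref{essa2}$\Rightarrow$\eqref{essa1} by invoking the already-established domain identity \eqref{fried1}, namely $\mathcal{D}(a_F)=\mathcal{D}(a_0^*)\cap\mathcal{D}(\bar{x}_0)$: under \eqref{essa2} this collapses to $\mathcal{D}(a_F)=\mathcal{D}(a_0^*)$, and since $a_0^*\supset a_F$ always holds, one gets $a_0^*=a_F$ self-adjoint. You instead prove \eqref{essa2}$\Rightarrow$\eqref{essa3} directly by the adjoint calculation $\langle\eta,x_0^*x_0\psi\rangle=\langle\bar{x}_0\eta,x_0\psi\rangle$, using $(x_0^*)^*=\bar{x}_0$ and \eqref{grund}. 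Both routes are short and standard; yours is marginally more self-contained since it does not appeal to the Friedrichs domain formula, while the paper's route cashes in on work already done in deriving \eqref{fried1}.
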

\begin{proof}
For a proof we may content with showing that the implications \eqref{essa1}\,$\Rightarrow$ \eqref{essa3}, \eqref{essa3}\,$\Rightarrow$ \eqref{essa2}, \eqref{essa2}\,$\Rightarrow$ \eqref{essa1} hold true. Note that
essential self-adjointness of $a_0$ means that $a_0^*$ is self-adjoint. In line with this, \eqref{essa1} implies $a_0^*\supset a_F$, with both $a_0^*$ and $a_F$ being self-adjoint. This can be possible only if equality occurs,
$a_0^*= a_F$. From this in view of Corollary \ref{closa0}\,\eqref{closa03} then \eqref{essa3} follows.
Obviously, the inclusion relation  \eqref{essa2} is a direct consequence of \eqref{essa3}.
But, if  \eqref{essa2} is fulfilled, then ${\mathcal{D}}(a_0^*)={\mathcal{D}}(a_F)$, by  \eqref{fried1}. Since owing to $a_0\subset a_F$ also $a_0^*\supset a_F$ follows, from the latter together with  ${\mathcal{D}}(a_0^*)={\mathcal{D}}(a_F)$ we have to conclude that $a_0^*=a_F$. Thus $a_0^*$ is self-adjoint, which is the same as \eqref{essa1}, again.
\end{proof}
\section{Auxiliary estimates}\label{app_a}
In all what follows, a vector $\vec{\alpha}=(\alpha_1,
\,\ldots\,,\alpha_n)\in {{{\mathbb{R}}}}^n,\ n\in{{\mathbb{N}}}$, is
termed $n$-dimensional probability vector if
$\alpha_k\geq 0$, for all $k\leq n$, and $\sum_{k=1}^n
\alpha_k=1$ holds.
\begin{lemma}\label{vektoren}
 For each two $n$-dimensional probability vectors $\vec{\xi}=(\xi_1,\,\ldots\,,\xi_n)$ and $\vec{\eta}=(\eta_1,\,\ldots\,,\eta_n)$ the estimate
\begin{equation}\label{4.1}
\sum_j \sqrt{\xi_j\,\eta_j}\leq 1- {\frac{1}{8}}\,\sum_{j\in J}\frac{\left(\xi_j-\eta_j\right)^2}{\max\{
\xi_j,\eta_j\}}
\end{equation}
holds, with the subset $J$ given by $J=\{\,j\,:\,\eta_j\not=0\}$.
\end{lemma}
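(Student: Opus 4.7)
\textbf{Proof plan for Lemma \ref{vektoren}.}

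The plan is to reduce everything to a single elementary identity for the squared Hellinger-type sum and then bound the terms on the right-hand side pointwise. First I would exploit that $\sum_j\xi_j=\sum_j\eta_j=1$ to write
\[
\sum_{j}\bigl(\sqrt{\xi_j}-\sqrt{\eta_j}\bigr)^{2}
=\sum_{j}(\xi_j+\eta_j)-2\sum_{j}\sqrt{\xi_j\eta_j}
=2-2\sum_{j}\sqrt{\xi_j\eta_j},
\]
so that the asserted estimate \eqref{4.1} is equivalent to
\[
\sum_{j\in J}\frac{(\xi_j-\eta_j)^{2}}{\max\{\xi_j,\eta_j\}}
\ \leq\ 4\sum_{j}\bigl(\sqrt{\xi_j}-\sqrt{\eta_j}\bigr)^{2}.
\]
This reformulation is the real content of the lemma, and the task now is purely pointwise.

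Next, fix $j\in J$, so in particular $\eta_j>0$ and $\max\{\xi_j,\eta_j\}>0$. Using the algebraic factorization $\xi_j-\eta_j=(\sqrt{\xi_j}-\sqrt{\eta_j})(\sqrt{\xi_j}+\sqrt{\eta_j})$ one gets
\[
\frac{(\xi_j-\eta_j)^{2}}{\max\{\xi_j,\eta_j\}}
=\bigl(\sqrt{\xi_j}-\sqrt{\eta_j}\bigr)^{2}\,
\frac{(\sqrt{\xi_j}+\sqrt{\eta_j})^{2}}{\max\{\xi_j,\eta_j\}}.
\]
Since both $\xi_j$ and $\eta_j$ are bounded above by $\max\{\xi_j,\eta_j\}$ and $\sqrt{\xi_j\eta_j}\leq\max\{\xi_j,\eta_j\}$, one has
\[
(\sqrt{\xi_j}+\sqrt{\eta_j})^{2}=\xi_j+\eta_j+2\sqrt{\xi_j\eta_j}\ \leq\ 4\max\{\xi_j,\eta_j\},
\]
and hence
\[
\frac{(\xi_j-\eta_j)^{2}}{\max\{\xi_j,\eta_j\}}\ \leq\ 4\bigl(\sqrt{\xi_j}-\sqrt{\eta_j}\bigr)^{2}.
\]

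Finally, summing this pointwise bound over $j\in J$ and noting that the extra indices $j\notin J$ (where $\eta_j=0$ and therefore $(\sqrt{\xi_j}-\sqrt{\eta_j})^{2}=\xi_j\geq 0$) only make the right-hand side larger, gives
\[
\sum_{j\in J}\frac{(\xi_j-\eta_j)^{2}}{\max\{\xi_j,\eta_j\}}\ \leq\ 4\sum_{j}\bigl(\sqrt{\xi_j}-\sqrt{\eta_j}\bigr)^{2}
=8\Bigl(1-\sum_{j}\sqrt{\xi_j\eta_j}\Bigr),
\]
which rearranges into exactly \eqref{4.1}. There is no genuine obstacle here; the only subtle point is the bookkeeping with the set $J$, which is harmless because the omitted indices contribute nothing to the right-hand side of \eqref{4.1} while the corresponding $(\sqrt{\xi_j}-\sqrt{\eta_j})^{2}$ on the other side is non-negative and can only help.
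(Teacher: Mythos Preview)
Your proof is correct. The key pointwise inequality you establish,
\[
\frac{(\xi_j-\eta_j)^2}{\max\{\xi_j,\eta_j\}}\leq 4\bigl(\sqrt{\xi_j}-\sqrt{\eta_j}\bigr)^2,
\]
is exactly equivalent to the paper's auxiliary estimate \eqref{auxi} after rearrangement, so the overall architecture (pointwise bound, then sum) is the same. Where you differ is in execution: the paper obtains \eqref{auxi} from the Taylor expansion of $\sqrt{1-x}$ and then assembles the global estimate through a case decomposition into four index sets $I_+,I_-,J_0,J_1$, treating each separately. Your route is more direct on both counts: the factorization $\xi_j-\eta_j=(\sqrt{\xi_j}-\sqrt{\eta_j})(\sqrt{\xi_j}+\sqrt{\eta_j})$ together with $(\sqrt{\xi_j}+\sqrt{\eta_j})^2\leq 4\max\{\xi_j,\eta_j\}$ gives the pointwise bound without any series expansion, and the summation step needs no case analysis beyond noting that indices $j\notin J$ contribute nonnegatively to $\sum_j(\sqrt{\xi_j}-\sqrt{\eta_j})^2$. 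Your argument is the cleaner of the two.
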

\begin{proof}
Let $\alpha,\,\beta\in  {\mathbb R}$ with $\alpha\geq \beta>0$. Then
\begin{equation}\label{auxi}
\sqrt{\alpha\,\beta}\leq \frac{1}{2}\,\alpha + \frac{1}{2}\,\beta -\frac{1}{8}\,\frac{(\alpha-\beta)^2}{\max\{\alpha,\beta\}}\,.
\end{equation}
In fact, by assumption $$0\leq \frac{\alpha-\beta}{\alpha}<1$$ from which by calculus (consider the power series expansion of $\sqrt{1-x}$ at $x=0$)
$$\sqrt{\beta}=\sqrt{\alpha}\,\sqrt{1-\left(\frac{\alpha-\beta}{\alpha}\right)}= \sqrt{\alpha}\,\left(1-{\frac{1}{2}}\,\left(\frac{\alpha-\beta}{\alpha}\right)-
{\frac{1}{8}}\,\left(\frac{\alpha-\beta}{\alpha}\right)^2-\delta_+\right)$$
follows, with
$
\delta_+={\frac{1}{16}}\,\left(\frac{\alpha-\beta}{\alpha}\right)^3+{\frac{5}{128}}\,\left(\frac{\alpha-\beta}{\alpha}\right)^4
+\,\cdots \ \geq 0
$.
Thus $$\sqrt{\beta}\leq \sqrt{\alpha}\,\left(1-{\frac{1}{2}}\,\left(\frac{\alpha-\beta}{\alpha}\right)-{\frac{1}{8}}\,\left(\frac{\alpha-\beta}{\alpha}\right)^2\right)$$ Multiplying this by $\sqrt{\alpha}$ yields an estimate
$$\sqrt{\alpha\,\beta}\leq \alpha-{\frac{1}{2}}\,\left(\alpha-\beta\right) -{\frac{1}{8}}\,\frac{\left(\alpha-\beta\right)^2}{\alpha}$$ Due to $\alpha=\max\{
\alpha,\beta\}$ and $\alpha-{\frac{1}{2}}\,\left(\alpha-\beta\right)=\frac{1}{2}\,\alpha + \frac{1}{2}\,\beta$ from this finally \eqref{auxi} is seen.

Now, with fixed but arbitrarily chosen $n\in {{\mathbb{N}}}$, let $\vec{\xi}=(\xi_1,\,\ldots\,,\xi_n)$ and $\vec{\eta}=(\eta_1,\,\ldots\,,\eta_n)$ be  $n$-dimensional probability vectors. Define a system $\{I_+, I_-, J_0, J_1\}$ of mutually disjoint subsets $I_+$, $I_-$, $J_0$, $J_1$ of subscripts by $I_+=\{\,j\,:\,\xi_j>\eta_j>0\,\}$, $I_-=\{\,j\,:\,\eta_j\geq\xi_j>0\,\}$,  $J_0=\{\,j\,:\,\xi_j\geq 0, \eta_j=0\,\}$, $J_1=\{\,j\,:\,\xi_j=0, \eta_j\not=0\,\}$. Note that the members of the system $\{I_+, I_-, J_0, J_1\}$ then form a decomposition of the set of all subscripts:
\begin{subequations}\label{unii}
\begin{equation}\label{uni_a}
   \{1,2,\ldots,n\}=I_+\cup \,I_-\cup \,J_0\,\cup\, J_1
\end{equation}
Also, by definition of these subsets and since not all $\eta_j$ can vanish simultaneously, for  $J=\{\,j\,:\,\eta_j\not=0\}$ one concludes that
\begin{equation}\label{uni_b}
   J=I_+\cup \,I_-\cup \,J_1\not=\emptyset\,.
\end{equation}
\end{subequations}
Suppose $J_0\not=\emptyset$. Then, because of $\eta_j=0$ for $j\in J_0$, in such case $\sqrt{\xi_j\,\eta_j}=0$. Hence
\[
\sqrt{\xi_j\,\eta_j}\leq \frac{1}{2}\,\xi_j + \frac{1}{2}\,\eta_j\,,\text{ for }j\in J_0
\]
then is obviously valid. Adding up these estimates in case of $J_0\not=\emptyset$ will yield that
\begin{subequations}\label{in}
\begin{equation}\label{in_a}
\sum_{j\in J_0} \sqrt{\xi_j\,\eta_j}\leq \frac{1}{2}\sum_{j\in J_0}\xi_j + \frac{1}{2}\sum_{j\in J_0}\eta_j\,
\end{equation}

Now, suppose $J_1\not=\emptyset$. Then, due to $\xi_j=0$ for $j\in J_1$, in this case $\sqrt{\xi_j\,\eta_j}=0$ and $\max\{
\xi_j,\eta_j\}=\eta_j$ hold. On the other hand, in this case
\[
\frac{1}{2}\,\xi_j + \frac{1}{2}\,\eta_j-{\frac{1}{8}}\,\frac{\left(\xi_j-\eta_j\right)^2}{\max\{
\xi_j,\eta_j\}}=
\frac{1}{2}\,\eta_j-{\frac{1}{8}}\,\frac{\left(\eta_j\right)^2}{\eta_j}=\frac{3}{8}\,\eta_j\,.
\]
Hence, whenever $J_1\not=\emptyset$ is supposed,
\begin{equation}\label{in_b}
\sqrt{\xi_j\,\eta_j}\leq \frac{1}{2}\,\xi_j + \frac{1}{2}\,\eta_j-{\frac{1}{8}}\,\frac{\left(\xi_j-\eta_j\right)^2}{\max\{
\xi_j,\eta_j\}}\,,\text{ for }j\in J_1
\end{equation}
can be followed. Note that in all cases if $J_1\not=\emptyset$ upon additively combining \eqref{in_a} and the respective inequalities of \eqref{in_b} for $j\in J_0\,\cup\,J_1$  the conclusion is
\begin{equation}\label{in_ab}
\sum_{j\in J_0\,\cup\,J_1}\sqrt{\xi_j\,\eta_j}\leq \frac{1}{2}\sum_{j\in J_0\,\cup\,J_1}\xi_j + \frac{1}{2}\sum_{j\in J_0\,\cup\,J_1}\eta_j-{\frac{1}{8}}\sum_{j\in J_1}\frac{\left(\xi_j-\eta_j\right)^2}{\max\{
\xi_j,\eta_j\}}\,.
\end{equation}
\end{subequations}
Now, suppose $I_+\not=\emptyset$ (resp.~$I_-\not=\emptyset$) and $j\in I_+$ (resp.~$j\in I_-$). Then, in choosing $\alpha=\xi_j$ and $\beta=\eta_j$ (resp.~$\alpha=\eta_j$ and $\beta=\xi_j$) within \eqref{auxi}, the estimate
\begin{equation}\label{in_c}
\sqrt{\xi_j\,\eta_j}\leq \frac{1}{2}\,\xi_j + \frac{1}{2}\,\eta_j-{\frac{1}{8}}\,\frac{\left(\xi_j-\eta_j\right)^2}{\max\{
\xi_j,\eta_j\}}\,,\text{ for }j\in I_+\text{ (resp. $j\in I_-$)}
\end{equation}
is seen to be fulfilled. Hence, upon combining the respective inequalities of \eqref{in_c}, in case of $I_+\cup\,I_-\not=\emptyset$ the conclusion is
\begin{equation}\label{in_d}
\sum_{j\in I_+\cup\,I_- }\sqrt{\xi_j\,\eta_j}\leq \frac{1}{2}\sum_{j\in I_+\cup\,I_- }\xi_j + \frac{1}{2}\sum_{j\in I_+\cup\,I_- }\eta_j-{\frac{1}{8}}\sum_{j\in I_+\cup\,I_- }\frac{\left(\xi_j-\eta_j\right)^2}{\max\{
\xi_j,\eta_j\}}\,.
\end{equation}
For a proof of \eqref{4.1}, suppose $\vec{\xi}$ and $\vec{\eta}$ to be fixed but arbitrarily chosen $n$-dimensional probability vectors. Two alternative cases can occur, $I_+\cup\, I_-=\emptyset$ or $I_+\cup\, I_-\not=\emptyset$.

In case of $I_+\cup\, I_-=\emptyset$, according \eqref{uni_a} and \eqref{uni_b} one has $\{1,2,\ldots, n\}=J_0\,\cup\,J_1$ and $J=J_1\not=\emptyset$. Hence, \eqref{in_ab} in this case reads as
\[
\sum_{j}\sqrt{\xi_j\,\eta_j}\leq \frac{1}{2}\sum_{j}\xi_j + \frac{1}{2}\sum_{j}\eta_j-{\frac{1}{8}}\sum_{j\in J}\frac{\left(\xi_j-\eta_j\right)^2}{\max\{
\xi_j,\eta_j\}}
\]
which owing to $\sum_{j}\xi_j=\sum_{j}\eta_j=1$ is \eqref{4.1}.

Suppose $I_+\cup\, I_-\not=\emptyset$. Then, four alternative subcases can occur: $J_0=J_1=\emptyset$, or $J_0\not=\emptyset$ and $J_1=\emptyset$, or
$J_0=\emptyset$ and $J_1\not=\emptyset$, or $J_0\not=\emptyset$ and $J_1\not=\emptyset$.\\
For $J_0=J_1=\emptyset$ by \eqref{uni_a} and by Definition of $J$ one has $\{1,2,\ldots, n\}=I_+\cup\,I_-=J$.
Hence, \eqref{in_d} applies and then owing to $\sum_{j}\xi_j=\sum_{j}\eta_j=1$ turns into \eqref{4.1}.\\
For $J_0\not=\emptyset$ and $J_1=\emptyset$, one has $J_0=J_0\,\cup\,J_1$ and according to \eqref{uni_a} and \eqref{uni_b}, $\{1,2,\ldots, n\}=I_+\cup\,I_-\cup J_0$ and $J=I_+\cup\,I_-$ have to be fulfilled. Because of $\sum_{j}\xi_j=\sum_{j}\eta_j=1$ the estimate \eqref{4.1} then arises upon additively combining \eqref{in_a} with \eqref{in_d}.
Finally, if the the last two alternatives occur, i.e.~if $I_+\cup\, I_-\not=\emptyset$ and $J_1\not=\emptyset$ are fulfilled, according to \eqref{uni_a} and \eqref{uni_b} the estimate \eqref{4.1} can be followed upon additively combining \eqref{in_ab} with \eqref{in_d} and by taking into account that $\sum_{j}\xi_j=\sum_{j}\eta_j=1$ is fulfilled. Since by $n\in {{\mathbb{N}}}$ and by $\vec{\xi}, \vec{\eta}\in {\mathbb R}^n$ all cases which might be occurring with two probability vectors  have been covered by the previous, \eqref{4.1} now can be taken for verified.
\end{proof}
Subsequently, let $\tau$ vary through the ascendingly directed net of all
finite partitions of the unit interval. That is, each $\tau$ is given as  $\tau=\{t_0,t_1,\ldots,t_n,t_{n+1}\}$,  with $n=n_\tau\in {{\mathbb{N}}}\cup \{0\}$ and  $0=t_0<t_1<\cdots< t_n<t_{n+1}=1$, and if $\tau^\prime$ is another finite partition of $[0,1]$, then the upward (preordering) direction $\tau\leq \tau^\prime$ is to mean that $\tau\subset \tau^\prime$ is fulfilled.
\begin{lemma}\label{kugel}
Let $\Phi: [0,1]\ni t\longmapsto \varphi_t \in {\mathcal H}$ be a continuous map into unit vectors in a Hilbert space  ${\mathcal H}$, with $|\langle\varphi_0,\varphi_1\rangle|<1$.
Then, the following estimate holds:
\begin{equation}\label{ungleich}
\arcsin{\sqrt{1-|\langle\varphi_0,\varphi_1\rangle|^2}}\leq
\lim_\tau {\sum_{j=0}^{n_\tau}
{\|\varphi_{t_j}-\varphi_{t_{j+1}}\|}}
\end{equation}
Thereby, if equality happens in \textup{\eqref{ungleich}}, then $\varphi_t$ for each $t$ is a linear combination of $\varphi_0$ and $\varphi_1$, $\varphi_t = \alpha(t) \varphi_0+\beta(t)\varphi_1$, with the coefficients $\alpha$ and $\beta$ given by
\begin{subequations}\label{xtcoeff}
\begin{equation}\label{xtcoeff1}
\alpha(t)=\frac{\langle\varphi_t,\varphi_0\rangle-\langle\varphi_t,\varphi_1\rangle\langle\varphi_1,\varphi_0\rangle}{1-|\langle\varphi_1,\varphi_0\rangle|^2}
\end{equation}
\begin{equation}\label{xtcoeff2}
\beta(t)=\frac{\langle\varphi_t,\varphi_1\rangle-\langle\varphi_t,\varphi_0\rangle\langle\varphi_0,\varphi_1\rangle}{1-|\langle\varphi_1,\varphi_0\rangle|^2}
\end{equation}
\end{subequations}
\end{lemma}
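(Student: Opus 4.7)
The plan is to bound $\arccos|\langle\varphi_0,\varphi_1\rangle|$---which equals $\arcsin\sqrt{1-|\langle\varphi_0,\varphi_1\rangle|^2}$ since both sides lie in $[0,\pmb{\pi}/2]$---from above by $L:=\lim_\tau\sum_j\|\varphi_{t_j}-\varphi_{t_{j+1}}\|$ using two pointwise ingredients: (i) for unit vectors $\varphi,\psi\in{\mathcal H}$, the chord-versus-angle comparison $\arccos|\langle\varphi,\psi\rangle|\leq 2\arcsin(\|\varphi-\psi\|/2)$; and (ii) the triangle inequality for the projective angle $d(\varphi,\psi):=\arccos|\langle\varphi,\psi\rangle|$ on the unit sphere of ${\mathcal H}$. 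Notice that (ii) cannot be obtained by appealing to Lemma~\ref{dreieck}, whose proof itself invokes Lemma~\ref{kugel}.

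For~(i), I simply note $\|\varphi-\psi\|^2=2-2\re\langle\varphi,\psi\rangle\geq 2(1-|\langle\varphi,\psi\rangle|)=4\sin^2(\arccos|\langle\varphi,\psi\rangle|/2)$. For~(ii), given unit vectors $\xi_1,\xi_2,\xi_3$, I would write $\xi_i=\langle\xi_i,\xi_2\rangle\xi_2+\zeta_i$ with $\zeta_i\perp\xi_2$ for $i=1,3$, yielding $\|\zeta_i\|=\sin\theta_{i2}$ where $\theta_{jk}:=\arccos|\langle\xi_j,\xi_k\rangle|\in[0,\pmb{\pi}/2]$; then $\langle\xi_1,\xi_3\rangle=\langle\xi_1,\xi_2\rangle\overline{\langle\xi_3,\xi_2\rangle}+\langle\zeta_1,\zeta_3\rangle$, and the reverse triangle inequality together with Cauchy--Schwarz gives $|\langle\xi_1,\xi_3\rangle|\geq\cos\theta_{12}\cos\theta_{23}-\sin\theta_{12}\sin\theta_{23}=\cos(\theta_{12}+\theta_{23})$, whence $\theta_{13}\leq\theta_{12}+\theta_{23}$ when the latter is at most $\pmb{\pi}/2$, and trivially otherwise since $\theta_{13}\leq\pmb{\pi}/2$. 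Telescoping~(ii) over a partition $\tau=\{0=t_0<\cdots<t_{n+1}=1\}$ and then applying~(i) on each segment gives, with $\ell_j:=\|\varphi_{t_j}-\varphi_{t_{j+1}}\|$, the estimate $\arccos|\langle\varphi_0,\varphi_1\rangle|\leq\sum_{j=0}^{n_\tau}2\arcsin(\ell_j/2)$. Since $2\arcsin(x/2)-x=O(x^3)$ near $0$ and $\Phi$ is uniformly continuous on the compact $[0,1]$, given any $\varepsilon>0$ all sufficiently fine partitions satisfy $\max_j\ell_j<\varepsilon$, whence $\sum_j 2\arcsin(\ell_j/2)\leq(1+C\varepsilon^2)\sum_j\ell_j$. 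If $L=\infty$ the assertion \eqref{ungleich} is trivial; otherwise, letting $\varepsilon\downarrow 0$ yields $\arccos|\langle\varphi_0,\varphi_1\rangle|\leq L$.

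For the equality case, assume equality in \eqref{ungleich}, so $L<\infty$. Applying the established inequality to the sub-curves $\Phi|[0,t]$ and $\Phi|[t,1]$ together with additivity of the length yields $\arccos|\langle\varphi_0,\varphi_t\rangle|+\arccos|\langle\varphi_t,\varphi_1\rangle|\leq L$, and comparing with~(ii) forces the equality $\arccos|\langle\varphi_0,\varphi_t\rangle|+\arccos|\langle\varphi_t,\varphi_1\rangle|=\arccos|\langle\varphi_0,\varphi_1\rangle|$ for every $t\in(0,1)$. Tracing the equality cases in the proof of~(ii) with $\xi_1=\varphi_0$, $\xi_2=\varphi_t$, $\xi_3=\varphi_1$, both the Cauchy--Schwarz step $|\langle\zeta_1,\zeta_3\rangle|\leq\|\zeta_1\|\|\zeta_3\|$ and the reverse triangle step must saturate; this forces $\zeta_3$ to be a complex scalar multiple of $\zeta_1$, and in turn implies $\varphi_1\in{\mathsf{Lin}}(\varphi_0,\varphi_t)$, hence $\varphi_t\in{\mathsf{Lin}}(\varphi_0,\varphi_1)$ (using $|\langle\varphi_0,\varphi_1\rangle|<1$). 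Writing $\varphi_t=\alpha(t)\varphi_0+\beta(t)\varphi_1$ and taking inner products with $\varphi_0$ and $\varphi_1$ gives a $2\times 2$ linear system whose Gram determinant $1-|\langle\varphi_0,\varphi_1\rangle|^2$ is non-zero; Cramer's rule then yields precisely \eqref{xtcoeff1}--\eqref{xtcoeff2}. The main obstacle is the equality analysis: one must carefully verify that the phase alignment in the Cauchy--Schwarz and reverse-triangle saturations does not leave a stray unit-modulus factor that would spoil the explicit formulas---but this comes out automatically once $\varphi_t$ is shown to lie in the two-dimensional complex span of $\varphi_0$ and $\varphi_1$, after which all phases are absorbed into $\alpha(t),\beta(t)$.
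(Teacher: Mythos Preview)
Your proof is correct and takes a genuinely different route from the paper's. The paper argues geometrically: it decomposes $\varphi_t=x(t)\varphi_0+y(t)\eta'+\eta_t$ with $\eta'\perp\varphi_0$ and $\eta_t\perp\{\varphi_0,\eta'\}$, then projects the curve to the $2$--sphere in ${\mathbb R}^3$ via $\vec r^{\,\prime}_t=(|x(t)|,|y(t)|,\|\eta_t\|)$, checks pointwise that $\|\varphi_t-\varphi_s\|\geq\|\vec r^{\,\prime}_t-\vec r^{\,\prime}_s\|$, and invokes the fact that the great--circle arc between $(1,0,0)$ and $(\sqrt P,\sqrt{1-P},0)$ has length $\arcsin\sqrt{1-P}$. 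The equality case is then immediate: if $\|\eta_t\|>0$ on an interval, the projected curve leaves the geodesic and is strictly longer.

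Your argument instead establishes directly that the projective angle $d(\varphi,\psi)=\arccos|\langle\varphi,\psi\rangle|$ satisfies the triangle inequality, pairs this with the elementary chord--angle comparison $d(\varphi,\psi)\leq 2\arcsin(\|\varphi-\psi\|/2)$, and passes to the limit over fine partitions using $2\arcsin(x/2)=x+O(x^3)$. This avoids the ${\mathbb S}^2$ projection entirely and is in spirit a proof that the chord length dominates the Fubini--Study length. The paper's route buys a cleaner and more visual equality analysis (just $\|\eta_t\|=0$); your route is more self-contained and arguably more elementary, at the price of a slightly more delicate equality discussion (tracing saturation in both the Cauchy--Schwarz and reverse--triangle steps, and using $|\langle\varphi_0,\varphi_1\rangle|<1$ to rule out the degenerate case where the coefficient of $\varphi_t$ vanishes). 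Both are valid; your remark about avoiding circularity with Lemma~\ref{dreieck} is well taken.
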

\begin{proof} Let $P=|\langle\varphi_0,\varphi_1\rangle|^2$. By supposition $0\leq P< 1$. Consider the map
$$
{\gamma}\,:\,[0,1]\ni t\,\longmapsto\,
\vec{r}_t=(\sqrt{1-t^2 (1-P)},t\,\sqrt{1-P},0)\in {{{\mathbb{R}}}}_+^{3}\,.
$$
It is easily seen that $\vec{r}_t\in {\mathbb{S}}^2$, where
${\mathbb{S}}^2$ is the unit
sphere around ${\mathsf 0}$ in ${{{\mathbb{R}}}}^{3}$. If we let $t$ vary monotoneously from $0$ to $1$, the vector
$\vec{r}_t$ runs bijectively through the points of the Euclidean geodesic
connecting $(1,0,0)$ with $(\sqrt{P},\sqrt{1-P},0)$ in the
positive octant of ${\mathbb{S}}^2$. By Euclidian geometry
the Euclidean length of the geodesic $\gamma$ then is
\begin{equation}\label{geos2}
l_2({\gamma})=\arcsin{\sqrt{1-P}}
\end{equation}
Now, let
$\varphi_t=x(t)\varphi_0+y(t)\eta^\prime+\eta_t$,
where $\eta^\prime$  is the unique unit vector
$\eta^\prime\,{\perp}\,\varphi_0$
obeying
\begin{equation}\label{kuzer}
\varphi_1=\langle \varphi_1,\varphi_0\rangle
\varphi_0+\sqrt{1-P}\,\eta^\prime
\end{equation}
and with a continuous
vector valued family $\{\eta_t\}$ such that
$\eta_t\,{\perp}\,\varphi_0$
and $\eta_t\,{\perp}\,\eta^\prime$, for all $t\in [0,1]$. It is clear that such a
decomposition always exists, with $x(t)=\langle\varphi_t,\varphi_0\rangle$ and $y(t)=\langle\varphi_t,\eta^\prime\rangle$. Let us look on the continuous curve ${\gamma}^\prime$
on ${\mathbb{S}}^2$ defined by
$${\gamma}^\prime\,:\,[0,1]\ni t\,\longmapsto\,\vec{r}^{\,\prime}_t=
(|x(t)|,|y(t)|,\|\eta_t\|)\in {\mathbb{S}}^2$$
From \eqref{kuzer} it follows that
${\gamma}^\prime$ is a continuous curve connecting the points
$(1,0,0)$ and
$(\sqrt{P},\sqrt{1-P},0)$ on ${\mathbb{S}}^2$.
Therefore, the Euclidean length of ${\gamma}^\prime$ has to be larger than the
length of the geodesic ${\gamma}$ between the same two points.
This means
\begin{equation}\label{8.0d}
l_2({\gamma}^{\,\prime})\geq l_2({\gamma})\,.
\end{equation}
Note that for any two points $t,s$ by standard estimates and by making use of the Cauchy-Schwarz inequality we have
\begin{eqnarray*}
\|\varphi_t-\varphi_s\|^2 & = &
|x(t)-x(s)|^2+|y(t)-y(s)|^2+\|\eta_t-\eta_s\|^2\\
& \geq &
(|x(t)|-|x(s)|)^2+(|y(t)|-|y(s)|)^2+|\|\eta_t\|-\|\eta_s\||^2\\
& = & \|\vec{r}^{\,\prime}_t-\vec{r}^{\,\prime}_s\|^2\,.
\end{eqnarray*}
From this we arrive at
$$\lim_\tau{\sum_{j=0}^{n_\tau}
{\|\varphi_{t_j}-\varphi_{t_{j+1}}\|}}\geq l_2({\gamma}^{\,\prime})$$
Combining the latter with \eqref{8.0d} and \eqref{geos2} finally yields (\ref{ungleich}). Relating the remaining assertion, note that if $|x(t)|^2+|y(t)|^2<1$ holds for some $0<t<1$, then by continuity this means that $\|\eta_s\|\not=0$ for all $s$ taken from some open interval around $t$. Hence,  the continuous curve $\gamma^\prime$ connecting  $(1,0,0)$ with $(\sqrt{P},\sqrt{1-P},0)$ in ${\mathbb{S}}^2$ will measurably differ from the points of the unique geodesic $\gamma$ connecting  the same two points. Hence, instead of  \eqref{8.0d} then  $l_2({\gamma}^{\,\prime})> l_2({\gamma})$ were to hold, which then implied strict inequality to occur in \eqref{ungleich}. Hence, equality in \eqref{ungleich} requires $\|\eta_t\|=0$, for any $t$. From this
$$\varphi_t = x(t)\varphi_0+y(t)\eta^\prime=\langle\varphi_t,\varphi_0\rangle\varphi_0+\langle\varphi_t,\eta^\prime\rangle\eta^\prime$$ follows. Resolving formula \eqref{kuzer} for $\eta^\prime$ and inserting
\[
\eta^\prime=\frac{\varphi_1-\langle\varphi_1,\varphi_0\rangle\varphi_0}{\sqrt{1-P}}
\]
into the latter equality will result in formulae \eqref{xtcoeff} for the coefficients $\alpha$ and $\beta$.
\end{proof}

\section{Unbounded linear operators affiliated to a $vN$-algebra}\label{app_b}
\noindent
Recall some basic facts from
unbounded operator theory. For these and subsequently
used basic definitions and notions refer to
\cite[V.\,\S\,3.,VI.\,\S\S\,1--3]{Kato:66}.
Let ${\mathcal H}$ be a Hilbert space, and be $R_0$ a densely
defined, symmetric positive linear operator acting from ${\mathcal H}$
into itself, with linear domain of definition ${\mathcal
D}(R_0)
\subset{\mathcal H}$. Recall that these settings mean that for the closure $[{\mathcal D}(R_0)]$ of ${\mathcal D}(R_0)$ the relations
$[{\mathcal D}(R_0)]={\mathcal H}$, $\langle
R_0\varphi,\psi\rangle=\langle
\varphi,R_0\psi\rangle$, and $\langle
R_0\psi,\psi\rangle\geq 0$, for all $\psi,\varphi\in
{\mathcal D}(R_0)$, respectively, have to be fulfilled.
Then, a positive symmetric sesquilinear form ${\mathfrak r}_0$
with the dense form domain ${\mathcal D}({\mathfrak r}_0)=
{\mathcal D}(R_0)$
is given by defining
${\mathfrak r}_0[\varphi,\psi]=\langle R_0\varphi,\psi\rangle$,
for $\varphi,\,\psi\in {\mathcal D}({\mathfrak r}_0)$.
It is known that this form is closable, see
\cite[\sc{Theorem} 1.27.]{Kato:66}.
In line with this, let
${\mathfrak r}=\bar{{\mathfrak r}}_0$ be the closure of this form, and
be
${\mathcal D}({\mathfrak r})$ the form domain of this closure. It is known that a unique
positive self-adjoint extension $R$ of $R_0$ exists the domain ${\mathcal D}(R)$ of which
obeys
${\mathcal D}(R)\subset {\mathcal D}({\mathfrak r})$, see
\cite[\sc{Theorems} 2.1., 2.11.]{Kato:66}.
This unique
extension $R$ of $R_0$ is the so-called {\em{Friedrichs extension}},
and is known to be the unique positive self-adjoint linear operator
obeying ${\mathcal D}(R)\subset {\mathcal D}({\mathfrak r})$ and
${\mathfrak r}[\varphi,\psi]=
\langle R\varphi,\psi\rangle$,
for $\varphi\in{\mathcal D}(R)$ and $\psi\in {\mathcal D}({\mathfrak r})$.
Relating Friedrichs extensions and elements of the set ${\mathsf
U}({\mathcal H})$ of all unitary operators over ${\mathcal H}$, there is the
following useful fact.
\begin{lemma}\label{fried}
Let $R_0,\,T_0$ be densely defined, positive symmetric linear operators on
${\mathcal H}$. Suppose, for some $u\in {\mathsf
U}({\mathcal H})$, $T_0=u^*R_0u$ and ${\mathcal D}(T_0)=u^*{\mathcal D}
(R_0)$ are fulfilled.
Then, the respective Friedrichs extensions $R,\,T$ of $R_0,\,T_0$ obey
$T=u^*Ru$, too.
\end{lemma}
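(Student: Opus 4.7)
The plan is to identify $u^{*}Ru$ as the Friedrichs extension of $T_{0}$ by verifying its characterizing properties and then invoking uniqueness. First, I would introduce the candidate operator $\tilde{T}:=u^{*}Ru$ with domain $\mathcal{D}(\tilde{T})=u^{*}\mathcal{D}(R)$. Since $R$ is positive self-adjoint and $u$ is unitary, unitary equivalence immediately yields that $\tilde{T}$ is positive and self-adjoint on $\mathcal{H}$. The next step is to check that $\tilde{T}\supset T_{0}$: any $\psi\in\mathcal{D}(T_{0})=u^{*}\mathcal{D}(R_{0})$ has the form $\psi=u^{*}\varphi$ for some $\varphi\in\mathcal{D}(R_{0})\subset\mathcal{D}(R)$, whence $\psi\in\mathcal{D}(\tilde{T})$ and
\[
\tilde{T}\psi=u^{*}Ru(u^{*}\varphi)=u^{*}R\varphi=u^{*}R_{0}\varphi=T_{0}\psi,
\]
using that $R$ extends $R_{0}$ and the hypothesis $T_{0}=u^{*}R_{0}u$.

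Next I would pass to the associated forms. Let $\mathfrak{r}_{0}$, $\mathfrak{t}_{0}$ denote the positive symmetric sesquilinear forms of $R_{0}$, $T_{0}$ on their natural domains $\mathcal{D}(R_{0})$, $\mathcal{D}(T_{0})=u^{*}\mathcal{D}(R_{0})$, and let $\mathfrak{r}=\bar{\mathfrak{r}}_{0}$, $\mathfrak{t}=\bar{\mathfrak{t}}_{0}$ be their closures with form domains $\mathcal{D}(\mathfrak{r})$, $\mathcal{D}(\mathfrak{t})$. A direct calculation shows that, for $\psi,\chi\in\mathcal{D}(T_{0})$,
\[
\mathfrak{t}_{0}[\psi,\chi]=\langle T_{0}\psi,\chi\rangle=\langle u^{*}R_{0}u\psi,\chi\rangle=\langle R_{0}u\psi,u\chi\rangle=\mathfrak{r}_{0}[u\psi,u\chi].
\]
Hence the graph norms obey $\|\psi\|^{2}+\mathfrak{t}_{0}[\psi,\psi]=\|u\psi\|^{2}+\mathfrak{r}_{0}[u\psi,u\psi]$, and since $u$ is a Hilbert space isometry mapping $\mathcal{D}(T_{0})$ bijectively onto $\mathcal{D}(R_{0})$, taking closures in the respective graph norms gives $\mathcal{D}(\mathfrak{t})=u^{*}\mathcal{D}(\mathfrak{r})$ together with $\mathfrak{t}[\psi,\chi]=\mathfrak{r}[u\psi,u\chi]$ throughout this domain.

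Now the characterizing inclusion $\mathcal{D}(R)\subset\mathcal{D}(\mathfrak{r})$ for the Friedrichs extension $R$ translates, by applying $u^{*}$, to
\[
\mathcal{D}(\tilde{T})=u^{*}\mathcal{D}(R)\subset u^{*}\mathcal{D}(\mathfrak{r})=\mathcal{D}(\mathfrak{t}).
\]
Thus $\tilde{T}$ is a positive self-adjoint extension of $T_{0}$ whose operator domain is contained in the form domain $\mathcal{D}(\mathfrak{t})$ of $\bar{\mathfrak{t}}_{0}$. By the uniqueness clause in the definition of the Friedrichs extension (which was recalled just before the lemma), $\tilde{T}$ must coincide with $T$, giving $T=u^{*}Ru$ as asserted.

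The main technical point, and the only non-formal step, is the verification $\mathcal{D}(\mathfrak{t})=u^{*}\mathcal{D}(\mathfrak{r})$ for the closures of the forms; everything else is either a direct algebraic manipulation or an application of the uniqueness of the Friedrichs extension. That step is essentially a bookkeeping argument about graph-norm completions and should cause no real difficulty once the unitary intertwining $\mathfrak{t}_{0}[\psi,\chi]=\mathfrak{r}_{0}[u\psi,u\chi]$ is in place.
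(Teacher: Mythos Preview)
Your proof is correct and follows essentially the same approach as the paper's: define the candidate $u^{*}Ru$, verify it is a positive self-adjoint extension of $T_{0}$, transport the form-domain inclusion $\mathcal{D}(R)\subset\mathcal{D}(\mathfrak{r})$ via $u^{*}$ to get $\mathcal{D}(u^{*}Ru)\subset\mathcal{D}(\mathfrak{t})$, and invoke uniqueness of the Friedrichs extension. Your treatment of the form-closure step via graph norms is slightly more explicit than the paper's, but the argument is the same.
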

\begin{proof}
Let the forms ${\mathfrak t}_0,{\mathfrak t}$ with respect to
$T_0$ have the
analogous meaning as ${\mathfrak r}_0,{\mathfrak r}$ defined above have with respect
to $R_0$. Then, by assumption one has
$
{\mathfrak t}_0[\varphi,\psi]={\mathfrak r}_0[u\varphi,u\psi]$, for all $\varphi,\psi\in
{\mathcal D}({\mathfrak t}_0)={\mathcal D}(T_0)$. From the structure
of
the operation of taking the closure (smallest closed extension)
of a (closable) form, see \cite[\sc{Theorem} 1.17.]{Kato:66}, together with the assumptions on
the respective domains and because the transforming operator
$u$ between them is unitary, it is easily seen that the previous type of
relation among forms has
to remain valid for the
respective closures and associated form domains too, that is
${\mathcal D}({\mathfrak t})=u^*{\mathcal D}({\mathfrak r})$ and
${\mathfrak t}[\varphi,\psi]={\mathfrak r}[u\varphi,u\psi]$ have to hold, for all
$\varphi,\psi\in{\mathcal D}({\mathfrak t})$. Let $S=u^*Ru$, with
${\mathcal D}(S)=u^*{\mathcal D}(R)$. Also
this operator is positive, self-adjoint and extends $T_0$, $T_0\subset S$. Since $R$ by
assumption is the Friedrichs extension of $R_0$, one has ${\mathcal D}(R)
\subset
{\mathcal D}({\mathfrak r})$. Hence, ${\mathcal D}(S)=u^*{\mathcal D}(R)\subset
u^*{\mathcal D}({\mathfrak r})$. From this in view of the above
${\mathcal D}(S)\subset {\mathcal D}({\mathfrak t})$ can be seen. That is, the positive
self-adjoint operator $S$ extends $T_0$ with domain of definition contained in the
form domain ${\mathcal D}({\mathfrak t})$. By the above mentioned uniqueness of the
Friedrichs extension one therefore has to conclude that $S=T$ holds.
\end{proof}
Let ${  N}$ be a $vN$-algebra acting on the Hilbert space ${\mathcal H}$, with commutant $vN$-algebra ${  N}^{\,\prime}$. Recall the notion of a  closed linear operator affiliated to a $vN$-algebra. In line with this, let $A$ be a closed linear operator on ${\mathcal H}$, with domain of definition ${\mathcal D}(A)$. Let ${  N}^{\,\prime}{\mathcal D}(A)=\{x\psi\,:\,x\in {  N}^{\,\prime},\,\psi\in{\mathcal D}(A) \}$. Then
\begin{definition}\label{affop}
$A$ is said to be
affiliated with ${  N}$, and then the notation $A\mathop{\eta}{  N}$ will be
used, if ${  N}^{\,\prime}{\mathcal D}(A)
\subset
{\mathcal D}(A)$ holds and $xA\subset Ax$ is fulfilled, for each $x\in {  N}^{\,\prime}$.
\end{definition}
In argueing by uniqueness of the spectral decomposition of a
positive self-adjoint linear operator $A$ it is easily seen that in this case
$A\mathop{\eta}{  N}$ holds if, and only if, all the spectral projections
of $A$ are in ${  N}$. In view of this and with the help of the
polar decomposition, which accordingly exists for
unbounded (closed) linear operator $A$, but with $|A|$ being
unbounded, see
\cite[VI.\,\S\,2.,\,7.]{Kato:66}, the general case of a closed linear
operator $A$ affiliated with ${  N}$ might be characterized as follows.
Let $A=U|A|$ be the polar decomposition of the
densely defined closed linear operator $A$,
and be $|A|=\int_0^\infty \lambda\, E(d\/\lambda)$ the spectral decomposition of the module
$|A|$ of $A$. Then $A\mathop{\eta}{  N}$ if, and only if,
$U, E(\lambda)\in {  N}$,
for each $\lambda\in {{{\mathbb{R}}}}_+$. In context of affiliated operators and the set ${\mathcal U}({  N}^{\,\prime})={  N}^{\,\prime}\cap {\mathsf
U}({\mathcal H})$ of the unitary operators of ${  N}^{\,\prime}$ the following
fact often proves useful.
\begin{lemma}\label{affi}
Let $A_0$ be a densely defined, positive
symmetric
linear operator on ${\mathcal H}$. Suppose the domain of definition of $A_0$ is
invariant under the action of each unitary element
$u\in {\mathcal U}({  N}^{\,\prime})$ and $A_0=u^*A_0u$ holds, for all these $u$.
Then, the Friedrichs extension $A$ of $A_0$ is affiliated with ${  N}$.
\end{lemma}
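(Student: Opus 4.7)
The plan is to deduce affiliation from the characterization mentioned right after Definition \ref{affop}: a positive self-adjoint operator $A$ is affiliated with $N$ if and only if all its spectral projections lie in $N$. So the strategy is to show that $A$ commutes with every unitary of $N^{\,\prime}$ in a strong enough sense to force each spectral projection of $A$ into $N^{\,\prime\,\prime}=N$.

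First, fix an arbitrary $u\in {\mathcal U}(N^{\,\prime})$ and apply Lemma \ref{fried} with the choice $R_0=T_0=A_0$. The hypotheses of the present lemma say precisely that $u^*{\mathcal D}(A_0)={\mathcal D}(A_0)$ and $A_0=u^*A_0u$, which are exactly the premises of Lemma \ref{fried} in this degenerate diagonal case. The conclusion is that the Friedrichs extensions coincide in the same way, i.e.\ $A=u^*Au$. Unpacking this as an identity of unbounded operators, it means $u\,{\mathcal D}(A)={\mathcal D}(A)$ and $uA\psi=Au\psi$ for every $\psi\in{\mathcal D}(A)$ and every $u\in {\mathcal U}(N^{\,\prime})$.

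Second, invoke uniqueness of the spectral resolution of the positive self-adjoint operator $A$. Writing $A=\int_0^\infty\lambda\,E(d\lambda)$, the operator $u^*Au$ (for unitary $u$ preserving ${\mathcal D}(A)$) has spectral resolution $u^*E(\cdot)u$. The equality $A=u^*Au$ together with uniqueness of the spectral family therefore forces $u^*E(\lambda)u=E(\lambda)$, and hence $uE(\lambda)=E(\lambda)u$, for every $\lambda\geq 0$ and every $u\in {\mathcal U}(N^{\,\prime})$.

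Finally, since $N^{\,\prime}$ is a unital $C^*$-algebra, every self-adjoint element of the unit ball of $N^{\,\prime}$ is a convex combination of two unitaries via functional calculus, so $N^{\,\prime}$ is the linear span of ${\mathcal U}(N^{\,\prime})$. Commutation of each $E(\lambda)$ with every element of ${\mathcal U}(N^{\,\prime})$ therefore extends by linearity to commutation with all of $N^{\,\prime}$, whence $E(\lambda)\in N^{\,\prime\,\prime}=N$. By the characterization of affiliation in terms of spectral projections recalled in the text, this gives $A\mathop{\eta}N$ as desired. The only step that requires any real care is the second one: one must be explicit that ``$A=u^*Au$'' is meant as equality of operators with matching domains, so that the uniqueness statement for the spectral resolution applies verbatim; this is where the domain-invariance clause supplied by Lemma \ref{fried} is crucial and is the main substantive point of the argument.
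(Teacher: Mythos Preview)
Your proof is correct and follows essentially the same approach as the paper: both use Lemma \ref{fried} with $R_0=T_0=A_0$ to obtain $A=u^*Au$ for each $u\in{\mathcal U}(N^{\,\prime})$, then pass from unitaries to all of $N^{\,\prime}$ by linear span. The only minor variation is in the closing step: the paper verifies Definition \ref{affop} directly (reading $A=u^*Au$ as $u{\mathcal D}(A)\subset{\mathcal D}(A)$ and $uA\subset Au$, then extending by linearity), whereas you pass through uniqueness of the spectral resolution to put the spectral projections in $N^{\,\prime\prime}=N$ and invoke the spectral characterization of affiliation. Both routes are short and equivalent; neither offers a real advantage over the other.
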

\begin{proof}
Note that since ${\mathcal U}({  N}^{\,\prime})$ is
a group
from the assumption
$u^*{\mathcal D}(A_0)\subset{\mathcal D}(A_0)$ for each $u\in
{\mathcal U}({  N}^{\,\prime})$ especially  $u^*{\mathcal D}(A_0)=
{\mathcal D}(A_0)$ follows. Hence, in defining $T_0=R_0=A_0$ we have
a situation where the premises of Lemma \ref{fried} are fulfilled. The
application of the latter then yields $A=u^*Au$, for each
$u\in {\mathcal U}({  N}^{\,\prime})$. This is the same as
$u^*{\mathcal D}(A)\subset {\mathcal D}(A)$ and $uA\subset Au$, for all
$u\in {\mathcal U}({  N}^{\,\prime})$.
Since the $vN$-algebra
${  N}^{\,\prime}$ can be linearily generated by their unitary elements,
from this in view of the above $x{\mathcal D}(A)\subset {\mathcal D}(A)$ as well as
$xA\subset Ax$ can be inferred, for each $x\in {  N}^{\,\prime}$.
\end{proof}

\section{Hermitian linear forms implemented by vectors\hfill{}}\label{app_c}
\noindent
Let $N$ be a $vN$-algebra acting on the Hilbert space ${\mathcal H}$, with commutant $vN$-algebra $N^{\,\prime}$.
In the following, for given vector $\varphi\in {\mathcal H}$, let $p(\varphi)$ and $p^{\,\prime}(\varphi)$ be the orthoprojections projecting onto the closed linear subspaces $[N^{\,\prime}\varphi]\subset{\mathcal H}$ and  $[N \varphi]\subset{\mathcal H}$, respectively. It is known that $p(\varphi)\in N$ and $p^{\,\prime}(\varphi)\in N^{\,\prime}$. Notice that to each two vectors
$\psi,\varphi\in
{\mathcal H}$ two linear forms $f_{\psi,\varphi}\in N_*$ and
$h_{\psi,\varphi}\in N^{\,\prime}_*$  can be
associated through the settings
\begin{equation}\label{impvec}
f_{\psi,\varphi}(\cdot)=\langle
(\cdot)|_N\psi,\varphi\rangle,\  h_{\psi,\varphi}(\cdot)=\langle
(\cdot)|_{N^{\,\prime}}\psi,\varphi\rangle.
\end{equation}
If positivity (or even weaker: hermiticity) of at least one of the linear
forms of \eqref{impvec} happens, this fact will have interesting consequences for the geometry of the
pair of vectors $\psi,\varphi$ in respect of the
action of $N$ (resp.~$N^{\,\prime}$) on ${\mathcal H}$, and which subsequently
will play a r\^ole. The prototype of such kind of results is in the following.
\begin{lemma}\label{affipos}
The following conditions are mutually equivalent\,\textup{:}
\begin{enumerate}
\item\label{affipos.1}
$h_{\psi,\varphi}\geq 0$ \textup{(}resp.~$f_{\psi,\varphi}\geq 0$\textup{)};
\item\label{affipos.2}
$p(\varphi)\psi=A\varphi$ \textup{(}resp.~$p^{\,\prime}(\varphi)\psi=A\varphi$\textup{)}, with a positive self-adjoint
$A\mathop{\eta}{  N}$ \textup{(}resp.~$A\mathop{\eta}{  N}^{\,\prime}$\textup{)};
\item\label{affipos.3}
$p(\varphi)\psi\in [{  N}_+\varphi]$
\textup{(}resp.~$p^{\,\prime}(\varphi)\psi\in [{  N}^{\,\prime}_+\varphi]$\textup{)}.
\end{enumerate}
The operator $A$ in \eqref{affipos.2} can be chosen to obey $A p(\varphi)^\perp=
p(\varphi)^\perp$ \textup{(}resp.~$A
p^{\,\prime}(\varphi)^\perp=p^{\,\prime}(\varphi)^\perp$\textup{)} and $A
p(\varphi)=p(p(\varphi)\psi)A$ \textup{(}resp.~$A
p^{\,\prime}(\varphi)=p^{\,\prime}(p^{\,\prime}(\varphi)\psi)A$\textup{)}, and to be invertible if
$p(\varphi)=p(\psi)$
\textup{(}resp.~$p^{\,\prime}(\varphi)=p^{\,\prime}(\psi)$\textup{)} is fulfilled.
\end{lemma}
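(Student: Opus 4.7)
\textbf{Proof proposal for Lemma \ref{affipos}.}

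By the manifest symmetry between the two parenthetical statements (just swap the roles of $N$ and $N^{\,\prime}$, and of $p(\varphi)$ and $p^{\,\prime}(\varphi)$), I would only prove the version involving $h_{\psi,\varphi}$ and operators $A\mathop{\eta}N$. I would verify the cycle \eqref{affipos.3}$\Rightarrow$\eqref{affipos.1}$\Rightarrow$\eqref{affipos.2}$\Rightarrow$\eqref{affipos.3}. The implication \eqref{affipos.3}$\Rightarrow$\eqref{affipos.1} is the easy one: writing $p(\varphi)\psi=\lim_n a_n\varphi$ with $a_n\in N_+$ and using that every $z\in N^{\,\prime}_+$ commutes with $p(\varphi)\in N$ and with $\sqrt{a_n}\in N$, together with $p(\varphi)\varphi=\varphi$, one computes $h_{\psi,\varphi}(z)=\lim_n\|\sqrt{a_n}\sqrt{z}\,\varphi\|^2\geq 0$. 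For \eqref{affipos.2}$\Rightarrow$\eqref{affipos.3}, spectral theory gives $A\mathop{\eta}N$ an increasing sequence of truncations $A_n=\int_0^n\lambda\,E(d\lambda)\in N_+$, and since $\varphi\in\mathcal{D}(A)$ will hold by the construction below, $A_n\varphi\to A\varphi=p(\varphi)\psi$, placing $p(\varphi)\psi$ in $[N_+\varphi]$.

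The real work is in \eqref{affipos.1}$\Rightarrow$\eqref{affipos.2}. I would define a linear operator $A_0$ with domain $\mathcal{D}(A_0)=N^{\,\prime}\varphi$, dense in $p(\varphi)\mathcal{H}$, by the prescription
\[
A_0(z\varphi)=p(\varphi)z\psi,\qquad z\in N^{\,\prime}.
\]
Well-definedness is the first check: if $z\varphi=0$ then $z$ vanishes on all of $N^{\,\prime}\varphi$, hence on $[N^{\,\prime}\varphi]$, so $zp(\varphi)=0$; since $p(\varphi)\in N$ commutes with $z\in N^{\,\prime}$ this gives $p(\varphi)z=0$ and thus $p(\varphi)z\psi=0$. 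Symmetry and positivity of $A_0$ both reduce to the identities
\[
\langle A_0(z_1\varphi),z_2\varphi\rangle=h_{\psi,\varphi}(z_2^*z_1),\qquad \langle A_0(z\varphi),z\varphi\rangle=h_{\psi,\varphi}(z^*z),
\]
using $p(\varphi)\varphi=\varphi$ and the commutation $p(\varphi)z_2=z_2p(\varphi)$; positivity then follows from \eqref{affipos.1} and symmetry from the resulting hermiticity of $h_{\psi,\varphi}$. Moreover, for $u\in\mathcal{U}(N^{\,\prime})$ one has $u\mathcal{D}(A_0)=\mathcal{D}(A_0)$ and, because $p(\varphi)$ commutes with $u$, the identity $u^*A_0u=A_0$ on $\mathcal{D}(A_0)$. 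Hence Lemma \ref{affi} applies and the Friedrichs extension $A$ of $A_0$ is a positive self-adjoint operator affiliated with $N$. By construction $\varphi\in\mathcal{D}(A_0)\subset\mathcal{D}(A)$ and $A\varphi=A_0\varphi=p(\varphi)\psi$, which is \eqref{affipos.2}.

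For the concluding assertions I would extend $A$ to all of $\mathcal{H}$ by declaring $A=\mathsf{1}$ on the invariant subspace $p(\varphi)^\perp\mathcal{H}$ (the projection $p(\varphi)^\perp$ belongs to $N$, so affiliation with $N$ is preserved). Since $A$ is affiliated to $N$ and $A\varphi=p(\varphi)\psi$, the closure of $Ax\varphi$ for $x\in N$ lies in $[Np(\varphi)\psi]$, which is $p(p(\varphi)\psi)\mathcal{H}$; combined with $Ap(\varphi)^\perp=p(\varphi)^\perp$, this yields the claimed intertwining $Ap(\varphi)=p(p(\varphi)\psi)A$. Finally, under $p(\varphi)=p(\psi)$ one has $p(p(\varphi)\psi)=p(\psi)=p(\varphi)$, so the range projection of $Ap(\varphi)$ coincides with $p(\varphi)$; together with injectivity of $A$ on $p(\varphi)^\perp\mathcal{H}$ (where $A=\mathsf{1}$) this gives $A$ invertible on $\mathcal{H}$. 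The main obstacle I anticipate is keeping track that the Friedrichs extension (as opposed to other self-adjoint extensions) is the correct choice: this is exactly what ensures affiliation with $N$ via Lemma \ref{affi}, since only the Friedrichs construction is canonical enough to inherit the invariance property $u^*A_0u=A_0$ through the extension process.
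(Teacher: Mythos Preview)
Your overall route matches the paper's, and the easy implications \eqref{affipos.2}$\Rightarrow$\eqref{affipos.3}$\Rightarrow$\eqref{affipos.1} are fine. The genuine gap is the well-definedness of $A_0$ in \eqref{affipos.1}$\Rightarrow$\eqref{affipos.2}. You assert that $z\varphi=0$ with $z\in N^{\,\prime}$ forces $z$ to vanish on all of $N^{\,\prime}\varphi$. This is false: elements of $N^{\,\prime}$ need not commute with one another, so from $z\varphi=0$ one cannot conclude $z(y\varphi)=0$ for $y\in N^{\,\prime}$. What \emph{does} follow is $z(x\varphi)=xz\varphi=0$ for every $x\in N$, hence $z$ vanishes on $[N\varphi]=p^{\,\prime}(\varphi)\mathcal{H}$ and $zp^{\,\prime}(\varphi)=0$; but this is the wrong projection, and from $zp^{\,\prime}(\varphi)=0$ alone one cannot deduce $p(\varphi)z\psi=zp(\varphi)\psi=0$.

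In the paper's proof, well-definedness of $A_0$ is exactly where the positivity hypothesis enters. With $s=s(h_{\psi,\varphi})\in N^{\,\prime}$ the support projection, one has $s\leq p^{\,\prime}(\varphi)$ and $s\leq p^{\,\prime}(\psi)$, and from $h_{\psi,\varphi}\bigl((\cdot)(p^{\,\prime}(\psi)-s)\bigr)=0$ one verifies $p\bigl((p^{\,\prime}(\psi)-s)\psi\bigr)\leq p(\varphi)^\perp$. Then $z\varphi=0$ gives $zp^{\,\prime}(\varphi)=0$, hence $zs=0$, and splitting $z\psi=zs\psi+z(p^{\,\prime}(\psi)-s)\psi$ yields $p(\varphi)z\psi=0$. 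Your argument never invokes $h_{\psi,\varphi}\geq 0$ at this stage, and without it the step cannot close. The same $N$ versus $N^{\,\prime}$ slip reappears in your intertwining paragraph: affiliation of $A$ with $N$ means $A$ commutes with $N^{\,\prime}$, so the relevant vectors are $A(z\varphi)=zp(\varphi)\psi$ for $z\in N^{\,\prime}$, and the closed linear span one needs is $[N^{\,\prime}p(\varphi)\psi]=p(p(\varphi)\psi)\mathcal{H}$, not ``$Ax\varphi$ for $x\in N$'' landing in $[Np(\varphi)\psi]$.
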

\begin{proof}
By symmetry between $N$ and $N^{\,\prime}$ with respect to the $^{\,\prime}$-operation
and by the definitions of $p(\varphi)$ and $p^{\,\prime}(\varphi)$ and symmetry
of the definitions within \eqref{impvec}, we may be content with
treating the assertions only in case of $h_{\psi,\varphi}$ explicitly.
We are going to verify that the net of implications
(\ref{affipos.2})
$\Rightarrow$(\ref{affipos.3})
$\Rightarrow$(\ref{affipos.1})
$\Rightarrow$(\ref{affipos.2}) is true, in this case. Assume $A$ in accordance with
(\ref{affipos.2}) to be given. Let $A=\int_0^\infty \lambda\, E(d\/\lambda)$ be the
spectral representation of the positive self-adjoint operator $A$.
As has been mentioned in {\sc{Appendix}} \ref{app_b}, since $A$ is affiliated with ${  N}$, each of the
spectral projections of the corresponding spectral decomposition belongs to
${  N}$. Hence, for each $n\in {{\mathbb{N}}}$, the operator
$A_n=\int_0^n \lambda\, E(d\/\lambda)$ obeys $A_n\in{  N}_+$, and
since by assumption $\varphi\in {\mathcal D}(A)$ holds,
$p(\varphi)\psi=A\varphi=\lim_{n\to\infty} A_n\varphi$ follows.
This is (\ref{affipos.3}).
Suppose (\ref{affipos.3}), that is, there is a sequence
$\{A_n\}\subset{  N}_+$ with
$p(\varphi)\psi=A\varphi=\lim_{n\to\infty} A_n\varphi$.
Owing to $p(\varphi),A_n\in{  N}_+$ for each $y\in {  N}^{\,\prime}$
the following conclusion holds\,:
\begin{multline*}
h_{\psi,\varphi}(y^*y)=
\langle y^*y\psi,\varphi\rangle=
\langle y^*yp(\varphi)\psi,\varphi\rangle=
\lim_{n\to\infty}\langle y^*yA_n\varphi,\varphi\rangle\\
=\lim_{n\to\infty}\langle\sqrt{A_n} y^*y\sqrt{A_n}\varphi,\varphi\rangle=
\lim_{n\to\infty}\|y\sqrt{A_n}\varphi\|^2.
\end{multline*}
Hence, $h_{\psi,\varphi}$
is positive, and thus (\ref{affipos.1}) holds.
Suppose (\ref{affipos.1}) is fulfilled. Let
$s(h_{\psi,\varphi})\in
{  N}^{\,\prime}$ be the
support projection of the positive linear form $h_{\psi,\varphi}$. Obviously one
has $s(h_{\psi,\varphi})\leq p^{\,\prime}(\varphi)$ and
$s(h_{\psi,\varphi})\leq p^{\,\prime}(\psi)$.
Let $z= p^{\,\prime}(\psi)-s(h_{\psi,\varphi})$. This is an orthoprojection in
${  N}^{\,\prime}$, which is orthogonal to $s(h_{\psi,\varphi})$, and
thus for each $x,y\in {  N}^{\,\prime}$  the
relation
$\langle xz\psi,y\varphi\rangle=h_{\psi,\varphi}(y^*xz)=
h_{\psi,\varphi}(y^*xzs(h_{\psi,\varphi}))=0$ follows.
From this one concludes that
$$[{  N}^{\,\prime}z\psi]\subset p(\varphi){\mathcal H}^\perp$$ is
fulfilled. Hence, $p(z\psi)\leq p(\varphi)^\perp$. Let a dense linear submanifold
${\mathcal D}$ of ${\mathcal H}$ be defined as
${\mathcal D}={  N}^{\,\prime}\varphi+p(\varphi){\mathcal H}^\perp$.
Firstly, we are
going to show that for $x\in {  N}^{\,\prime}$ and $\delta\in
p(\varphi){\mathcal H}^\perp$ obeying $x\varphi+\delta=0$ the relation
$p(\varphi)x\psi+\delta=0$ is implied. In fact, since
$x\varphi\in p(\varphi){\mathcal H}$ holds $x\varphi+\delta=0$ implies
$\delta=0$ and $x\varphi=0$. The latter implies $xp^{\,\prime}(\varphi)=0$, from which in view of
the above especially $xs(h_{\psi,\varphi})=0$ follows. Accordingly
 $p(\varphi)x\psi=p(\varphi)xp^{\,\prime}(\psi)\psi=
p(\varphi)xs(h_{\psi,\varphi})\psi+p(\varphi)xz\psi=p(\varphi)xz\psi$ holds.
Since in view of above $$xz\psi\in{  N}^{\,\prime}z\psi\subset
p(\varphi){\mathcal H}^\perp$$ has to be fulfilled, from the previous $p(\varphi)x\psi=0$
is obtained. Thus, the implication in question is true.
From the above proven implication one especially concludes
that by the prescription $A_0\,:\,{\mathcal D}\ni x\varphi+\delta\,\longmapsto\,
p(\varphi)x\psi+\delta$, for all $x\in {  N}^{\,\prime}$ and $\delta\in
p(\varphi){\mathcal H}^\perp$, a linear operator acting from the dense domain
${\mathcal D}(A_0)={\mathcal D}$ into ${\mathcal H}$ is given. For
$x,y\in {  N}^{\,\prime}$ and $\delta,\delta'\in
p(\varphi){\mathcal H}^\perp$ by definition of $A_0$ one infers that
$$\langle A_0(x\varphi+\delta),y\varphi+\delta'\rangle=
\langle p(\varphi)x\psi+\delta, y\varphi+\delta'
\rangle=h_{\psi,\varphi}(y^*x)+\langle\delta,\delta'\rangle$$ is fulfilled.
Since by assumption $h_{\psi,\varphi}\geq 0$ holds, the just considered relation
especially yields
$\langle A_0\varPsi,\varPsi\rangle\geq 0$, for each
$\varPsi\in{\mathcal D}(A_0)$. Thus, $A_0$ is seen to be a densely defined,
positive symmetric linear operator.
Note that for each unitary
$u\in {\mathcal U}({  N}^{\,\prime})$ obviously
$u{  N}^{\,\prime}\varphi\subset {  N}^{\,\prime}\varphi$ as well as
$u p(\varphi){\mathcal H}\subset p(\varphi){\mathcal H}$ hold true. From the latter
owing to
normality of $u$ then $u p(\varphi){\mathcal H}^\perp\subset
p(\varphi){\mathcal H}^\perp$ follows. Hence, in view of the definition of
${\mathcal D}$
one infers that $u {\mathcal D}(A_0)\subset {\mathcal D}(A_0)$,
for each $u\in {\mathcal U}({  N}^{\,\prime})$. Clearly, since we have
to do with a group, then even $u {\mathcal D}(A_0)=
{\mathcal D}(A_0)$ has to be fulfilled.
Also, from the just derived and for each $x\in{  N}^{\,\prime}$,
$\delta\in p(\varphi){\mathcal H}^\perp$
and $u\in {\mathcal U}({  N}^{\,\prime})$ one concludes from
$u\delta\in p(\varphi){\mathcal H}^\perp$ that
$
A_0u\{x\varphi+\delta\}=A_0\{ux\varphi+u\delta\}=p(\varphi)ux\psi+
u\delta=u\{p(\varphi)x\psi+\delta\}=uA_0\{x\varphi+\delta\}
$ is fulfilled.
Thus, in summarizing from all that,
the densely defined, positive, symmetric linear operator $A_0$ obeys
$u {\mathcal D}(A_0)=
{\mathcal D}(A_0)$ and $A_0=u^*A_0u$, for each unitary
$u\in {\mathcal U}({  N}^{\,\prime})$. By Definition \ref{affop}
and Lemma \ref{affi} the Friedrichs extension $A$ of $A_0$ exists, and is a
positive self-adjoint
operator which is affiliated with ${  N}$. Since $p(\varphi)\psi=A_0\varphi$
holds by construction, (\ref{affipos.2}) is seen to be fulfilled with the Friedrichs
extension $A$ of $A_0$, which owing to $A_0 p(\varphi)^\perp=p(\varphi)^\perp$ obeys
$A p(\varphi)^\perp=p(\varphi)^\perp$. Note that for this
special $A$ owing to
linearity of ${\mathcal D}(A)$ and
$p(\varphi)^\perp{\mathcal
H}\subset {\mathcal D}(A_0)\subset {\mathcal D}(A)$ for $\eta\in
{\mathcal H}$ one finds $\eta\in {\mathcal D}(A)$ if, and only if,
$p(\varphi)\eta\in {\mathcal D}(A)$ is fulfilled. Thus, in this case
${\mathcal D}(A p(\varphi))={\mathcal D}(A)$ has to hold. Having in
mind this, since $A$ is self-adjoint and affiliated with $N$, for each
$\eta\in {\mathcal D}(A)$ and all $\chi\in
{\mathcal D}(A_0)$, with $\chi=z\varphi+p(\varphi)^\perp\chi$, $z\in
N^{\,\prime}$, one has $\langle A p(\varphi)\eta,\chi\rangle=
\langle p(\varphi)\eta,A_0\chi\rangle=\langle
p(\varphi)\eta,z p(\varphi)\psi +
p(\varphi)^\perp\chi\rangle=\langle\eta,p(\varphi) z\psi\rangle=
\langle\eta,A_0 z\varphi\rangle=\langle\eta,A z\varphi\rangle=\langle
A\eta,z\varphi\rangle=\langle A\eta,p(\varphi)\chi\rangle=\langle
p(\varphi)A\eta,\chi\rangle$. Hence, by denseness of ${\mathcal D}(A_0)$
within ${\mathcal H}$, from this $A p(\varphi)\supset p(\varphi)A$
follows. In view of ${\mathcal D}(A p(\varphi))={\mathcal D}(A)$ then
even $A p(\varphi)=p(\varphi)A$ is seen. Note that from
\eqref{affipos.2} for the range ${\mathcal R}(A p(\varphi))$
especially ${\mathcal R}(A p(\varphi))\subset p(p(\varphi)\psi){\mathcal
H}$ follows. From this and the previous owing to $p(p(\varphi)\psi)\leq
p(\varphi)$ then finally $A p(\varphi)=p(p(\varphi)\psi) A$ is obtained.
The validity of the latter for $p(\psi)=p(\varphi)$
is obvious by standard conclusions since by symmetry of the problem
in $\psi,\varphi$ in this case the range ${\mathcal R}(A_0)$ of $A_0$
is dense within
${\mathcal H}$. Hence, the range ${\mathcal R}(A)$ of $A$ will be dense.
\end{proof}
\begin{corolla}\label{hermform}
For $\varphi,\psi\in {\mathcal H}$, $h_{\psi,\varphi}=h_{\psi,\varphi}^*$
\textup{(}resp.~$f_{\psi,\varphi}=f_{\psi,\varphi}^*$\textup{)} if, and only if,
$$p(\varphi)\psi\in
[N_{\mathrm{h}}\varphi]\ (\text{resp.~}p^{\,\prime}(\varphi)\psi\in
[N^{\,\prime}_{\mathrm{h}}\varphi])$$
in which case
$p(\varphi)\psi=p^{\,\prime}(\varphi)p(\varphi)\psi$ \textup{(}resp.~
$p^{\,\prime}(\varphi)\psi=p(\varphi)p^{\,\prime}(\varphi)\psi$\textup{)} is fulfilled.
\end{corolla}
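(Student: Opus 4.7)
The two assertions in the Corollary are symmetric under the interchange of $N$ and $N^{\,\prime}$, so I will treat only the statement about $h_{\psi,\varphi}$. Throughout I will use the harmless reduction to $\psi=p(\varphi)\psi$: since $p(\varphi)\in N$ commutes with $z\in N^{\,\prime}$ and $p(\varphi)\varphi=\varphi$, one obtains $h_{\psi,\varphi}=h_{p(\varphi)\psi,\varphi}$, so replacing $\psi$ by $p(\varphi)\psi$ does not change the form or the target conclusion. Recall also that $h_{\psi,\varphi}^*(z)=\overline{\langle z^*\psi,\varphi\rangle}=h_{\varphi,\psi}(z)$, so hermiticity of $h_{\psi,\varphi}$ is just the identity $h_{\psi,\varphi}=h_{\varphi,\psi}$ on $N^{\,\prime}$.

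For the direction $(\Leftarrow)$, if $p(\varphi)\psi=\lim_n x_n\varphi$ with $x_n\in N_{\mathrm h}$, then commutation between $x_n\in N$ and $z\in N^{\,\prime}$ together with $x_n^*=x_n$ lets one compute $h_{\psi,\varphi}(z)=\lim_n\langle z\varphi,x_n\varphi\rangle=\langle z\varphi,p(\varphi)\psi\rangle=\langle z\varphi,\psi\rangle$ (the last step using $z\varphi\in p(\varphi){\mathcal H}$); this equals $h_{\varphi,\psi}(z)$, giving hermiticity.

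For $(\Rightarrow)$, the plan is to invoke the Jordan decomposition of the normal hermitian form $h:=h_{\psi,\varphi}\in (N^{\,\prime})_*$: write $h=h_+-h_-$ with $h_\pm\geq{\mathsf 0}$ normal having mutually orthogonal supports $e=s(h_+)$, $f=s(h_-)\in N^{\,\prime}$. The crucial identifications are
\[
h_+(z)=h(ze)=\langle ze\psi,\varphi\rangle,\qquad h_-(z)=-h(zf)=\langle z(-f\psi),\varphi\rangle,\quad z\in N^{\,\prime}.
\]
These follow from the support identities $h_\pm(y)=h_\pm(s(h_\pm)\,y\,s(h_\pm))$ combined with $ef=fe={\mathsf 0}$ (for instance, $h_-(ze)=h_-(fzef)=h_-(fz(ef))=0$ and $h_+(zf)=h_+(ezfe)=0$). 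Thus $h_+$ is implemented by the pair $(e\psi,\varphi)$ and $h_-$ by $(-f\psi,\varphi)$, and Lemma \ref{affipos}\,\eqref{affipos.1}$\Rightarrow$\eqref{affipos.3} applies to each to give $p(\varphi)(e\psi)\in[N_+\varphi]$ and $p(\varphi)(-f\psi)\in[N_+\varphi]$, whose sum lies in the closed real linear subspace $[N_{\mathrm h}\varphi]$.

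It remains to see that $p(\varphi)(1-e-f)\psi={\mathsf 0}$: for $z\in N^{\,\prime}$ the three identities above give $\langle z(1-e-f)\psi,\varphi\rangle=h(z)-h_+(z)+h_-(z)=0$, whence $(1-e-f)\psi\perp[N^{\,\prime}\varphi]$. Consequently $p(\varphi)\psi=p(\varphi)(e+f)\psi\in[N_{\mathrm h}\varphi]$, and the concluding identity $p(\varphi)\psi=p^{\,\prime}(\varphi)p(\varphi)\psi$ is automatic from $[N_{\mathrm h}\varphi]\subset[N\varphi]=p^{\,\prime}(\varphi){\mathcal H}$. The only delicate point, rather than a genuine obstacle, is spelling out the Jordan decomposition in such a way that both $h_+$ and $h_-$ appear as forms of type $h_{\cdot,\varphi}$ with $\varphi$ fixed in the second slot, so that Lemma \ref{affipos} can be invoked directly on each positive piece.
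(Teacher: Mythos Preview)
Your proof is correct and follows essentially the same route as the paper: both directions rest on the same ideas (approximation by $x_n\varphi$ with $x_n\in N_{\mathrm h}$ for $(\Leftarrow)$, and the Jordan decomposition of the normal hermitian form $h$ together with two applications of Lemma~\ref{affipos} for $(\Rightarrow)$). The only cosmetic difference is that the paper takes the complementary pair $(q,q^\perp)$ with $q=s(h_+)$, so that $h((\cdot)q)=h_+\ge0$ and $-h((\cdot)q^\perp)=h_-\ge0$ and $q+q^\perp={\mathsf 1}$ automatically; this yields $p(\varphi)\psi=p(\varphi)(q\psi)+p(\varphi)(q^\perp\psi)\in[N_+\varphi]-[N_+\varphi]\subset[N_{\mathrm h}\varphi]$ without the extra step of disposing of a residual piece $(1-e-f)\psi$.
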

\begin{proof}
We content with treating the case with
$h_{\psi,\varphi}$ (the other
case can be dealt with analogously).
Let $h=h_{\psi,\varphi}=h_{\psi,\varphi}^*$, and be $q\in N^{\,\prime}$ an
orthoprojection such that both $$h((\cdot)q)\geq 0,\, -h((\cdot)q^\perp)\geq 0$$
are fulfilled on the $vN$-algebra $N^{\,\prime}$ (for $q$ the support projection of the positive part $h_+$ in the canonical decomposition of $h$ can be chosen). Hence, $h_{q\psi,\varphi}\geq 0$ and
$h_{(-q^\perp\psi),\varphi}\geq 0$. Twice applying Lemma \ref{affipos}
\eqref{affipos.3}
then yields
$$p(\varphi)\psi=p(\varphi)(q\psi+q^\perp\psi)\in
[N_+\varphi]-[N_+\varphi]\subset [N_{\mathrm{h}}\varphi]$$
On the other hand, owing to $N_{\mathrm{h}}=N_+-N_+$ for each $a\in N_{\mathrm{h}}$
one has $$h_{a\varphi,\varphi}=\langle(\cdot)|_{N^{\,\prime}}\sqrt{a_+}\varphi,
\sqrt{a_+}\varphi\rangle-\langle(\cdot)|_{N^{\,\prime}}\sqrt{a_-}\varphi,
\sqrt{a_-}\varphi\rangle$$ which obviously is hermitian. Hence, since $p(\varphi)\in N$ holds,
in case of $p(\varphi)\psi\in
[N_{\mathrm{h}}\varphi]$ the linear form $h_{\psi,\varphi}=h_{p(\varphi)\psi,\varphi}$
will be a limit of hermitian linear forms, and therefore has to be hermitian.
Also, if the premises are fulfilled, from the
previous in view of the definition of $p^{\,\prime}(\varphi)$ the validity of
$p(\varphi)\psi=p^{\,\prime}(\varphi)p(\varphi)\psi$ is obvious. This proves the assertion for $h_{\psi,\varphi}$, and analogously one proceeds in case of $f_{\psi,\varphi}$,
with $N$ and $p^{\,\prime}(\varphi)$ instead
of $N^{\,\prime}$ and $p(\varphi)$, respectively.
\end{proof}
\begin{corolla}\label{kreuz0}
$\varphi,\psi\in {\mathcal H},\ h_{\psi,\varphi}=h_{\psi,\varphi}^*\ \ \Longrightarrow\ \
p^{\,\prime}(\varphi)\psi\in [N_{\mathrm{h}}\varphi]\,.$
\end{corolla}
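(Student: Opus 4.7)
The plan is to deduce the conclusion directly from Lemma \ref{ocomp} together with the defining identity of hermiticity; Corollary \ref{hermform} is not actually needed. By Lemma \ref{ocomp} one has the $\langle\cdot,\cdot\rangle_{\mathbb R}$-orthogonal decomposition
\[
[N\varphi]_{\mathbb R}=[N_{\mathrm{h}}\varphi]\,\oplus_{\mathbb R}\,i\bigl[p^{\,\prime}(\varphi)N^{\,\prime}_{\mathrm{h}}\varphi\bigr],
\]
where $\langle\cdot,\cdot\rangle_{\mathbb R}=\Re\langle\cdot,\cdot\rangle$. Since $p^{\,\prime}(\varphi)\psi$ trivially belongs to $p^{\,\prime}(\varphi){\mathcal H}=[N\varphi]$, the assertion $p^{\,\prime}(\varphi)\psi\in[N_{\mathrm{h}}\varphi]$ reduces to checking that $p^{\,\prime}(\varphi)\psi$ is $\langle\cdot,\cdot\rangle_{\mathbb R}$-orthogonal to every vector of the form $i\,p^{\,\prime}(\varphi)z\varphi$ with $z\in N^{\,\prime}_{\mathrm{h}}$.

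The key computation I would carry out is the following: for $z\in N^{\,\prime}_{\mathrm{h}}$, using successively $\langle a,ib\rangle=-i\langle a,b\rangle$, self-adjointness and idempotence of $p^{\,\prime}(\varphi)$, the commutation of $z$ with $p^{\,\prime}(\varphi)\in N^{\,\prime}$, the identity $p^{\,\prime}(\varphi)\varphi=\varphi$, and $z^{*}=z$, one obtains
\[
\bigl\langle p^{\,\prime}(\varphi)\psi,\,i\,p^{\,\prime}(\varphi)z\varphi\bigr\rangle
=-i\,\bigl\langle p^{\,\prime}(\varphi)\psi,\,p^{\,\prime}(\varphi)z\varphi\bigr\rangle
=-i\,\langle\psi,z\varphi\rangle
=-i\,h_{\psi,\varphi}(z),
\]
and consequently
\[
\bigl\langle p^{\,\prime}(\varphi)\psi,\,i\,p^{\,\prime}(\varphi)z\varphi\bigr\rangle_{\mathbb R}=\Im\,h_{\psi,\varphi}(z).
\]
The hermiticity hypothesis $h_{\psi,\varphi}=h_{\psi,\varphi}^{*}$ reads $h_{\psi,\varphi}(z)=\overline{h_{\psi,\varphi}(z^{*})}$ for every $z\in N^{\,\prime}$; specialised to $z$ self-adjoint this forces $h_{\psi,\varphi}(z)\in\mathbb{R}$, so the right-hand side above vanishes.

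Hence $p^{\,\prime}(\varphi)\psi$ is $\langle\cdot,\cdot\rangle_{\mathbb R}$-orthogonal to each element of $i[p^{\,\prime}(\varphi)N^{\,\prime}_{\mathrm{h}}\varphi]$ and, lying in $[N\varphi]$, must belong to $[N_{\mathrm{h}}\varphi]$ by the decomposition from Lemma \ref{ocomp}. No serious obstacle arises; the only care required is keeping track of which scalar product (complex or its real part) is in use at each step, and exploiting that $h_{\psi,\varphi}$ depends on $\psi$ only through $p^{\,\prime}(\varphi)\psi$, which is precisely why $p^{\,\prime}(\varphi)\psi$, rather than $\psi$ itself, is the object naturally captured by the hermiticity hypothesis.
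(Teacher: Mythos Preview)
Your approach via Lemma~\ref{ocomp} is a genuinely different route from the paper's. The paper invokes Corollary~\ref{hermform} to obtain $p(\varphi)\psi\in[N_{\mathrm h}\varphi]$ and then treats $p(\varphi)^\perp p^{\,\prime}(\varphi)\psi$ separately by constructing explicit hermitian approximants $y_n=p(\varphi)^\perp z_n+z_n^*p(\varphi)^\perp$. Your argument avoids this case split entirely and is more conceptual.

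There is, however, a genuine error in your key computation. The step appealing to ``the commutation of $z$ with $p^{\,\prime}(\varphi)\in N^{\,\prime}$'' is unjustified: both operators lie in $N^{\,\prime}$, which need not be commutative, and in general $z\varphi\notin[N\varphi]$, so $p^{\,\prime}(\varphi)z\varphi\neq z\varphi$. Following your chain correctly gives
\[
\bigl\langle p^{\,\prime}(\varphi)\psi,\,i\,p^{\,\prime}(\varphi)z\varphi\bigr\rangle
=-i\,\langle p^{\,\prime}(\varphi)\psi,z\varphi\rangle
=-i\,\langle zp^{\,\prime}(\varphi)\psi,\varphi\rangle
=-i\,h_{\psi,\varphi}\bigl(zp^{\,\prime}(\varphi)\bigr),
\]
and $zp^{\,\prime}(\varphi)$ is not self-adjoint, so your final step does not apply directly.

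The repair is short. Either observe that $p^{\,\prime}(\varphi)z\varphi=w\varphi$ for $w=p^{\,\prime}(\varphi)zp^{\,\prime}(\varphi)\in N^{\,\prime}_{\mathrm h}$, and $w$ \emph{does} commute with $p^{\,\prime}(\varphi)$, so your computation runs verbatim with $w$ in place of $z$ to yield $-i\,h_{\psi,\varphi}(w)$ with $w=w^*$. Or argue directly from hermiticity: since $(zp^{\,\prime}(\varphi))^*=p^{\,\prime}(\varphi)z$ and $h_{\psi,\varphi}(p^{\,\prime}(\varphi)z)=\langle z\psi,p^{\,\prime}(\varphi)\varphi\rangle=h_{\psi,\varphi}(z)\in\mathbb R$, hermiticity gives $h_{\psi,\varphi}(zp^{\,\prime}(\varphi))=\overline{h_{\psi,\varphi}(z)}\in\mathbb R$, so the imaginary part still vanishes. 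With either fix your proof is complete and cleaner than the paper's.
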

\begin{proof}
By Corollary \ref{hermform} one has $p(\varphi)\psi\in[N_{\mathrm{h}}\varphi]$ and
$p(\varphi)\psi=p(\varphi)p^{\,\prime}(\varphi)\psi$.
Thus we have $p(\varphi)p^{\,\prime}(\varphi)\psi\in
[N_{\mathrm{h}}\varphi]$. Let us consider $p(\varphi)^\perp p^{\,\prime}(\varphi)\psi$.
Then $p^{\,\prime}(p(\varphi)^\perp p^{\,\prime}(\varphi)\psi)\leq p^{\,\prime}(\varphi)$, and therefore
there exists a sequence $\{z_n\}\subset N$ obeying $p(\varphi)^\perp p^{\,\prime}(\varphi)\psi=
\lim_{n\to\infty} z_n\varphi$. Define $y_n=p(\varphi)^\perp z_n+z_n^*p(\varphi)^\perp$,
for each $n\in {{\mathbb{N}}}$. Then $\{y_n\}\subset N_{\mathrm{h}}$, and one has
$$p(\varphi)^\perp p^{\,\prime}(\varphi)\psi=\lim_{n\to\infty} z_n\varphi=\lim_{n\to\infty}
p(\varphi)^\perp z_n\varphi=\lim_{n\to\infty} y_n\varphi$$ Hence,
$p(\varphi)^\perp p^{\,\prime}(\varphi)\psi\in[N_{\mathrm{h}}\varphi]$. Taking together
this with the above yields $p^{\,\prime}(\varphi)\psi=p(\varphi)p^{\,\prime}(\varphi)\psi+
p(\varphi)^\perp p^{\,\prime}(\varphi)\psi\in[N_{\mathrm{h}}\varphi]$.
\end{proof}
\section{Hilbert-Schmidt implementations\hfill{}}\label{app_d}
\noindent
Suppose ${\mathcal H}$ is a separable Hilbert space. Throughout
$\gamma:  I\ni t\longmapsto \varrho_t\in {\mathcal S}_0({\mathsf B}({\mathcal H}))$ be a parameterized $C^1$-curve in the set of density operators over ${\mathcal H}$. Then, the implementation
$
I\ni t\longmapsto \sqrt{\varrho_t}\in {\mathsf{ H.S.}}({\mathcal H})
$
of $\gamma$ will be under consideration.

Note that continuity of $t\mapsto \varrho_t$ as a curve of density operators implies continuity of $t\mapsto \sqrt{\varrho_t}$ in respect of the Hilbert-Schmidt topology. This is due to the following.
\begin{lemma}\label{wurzeleins}
Suppose $\varrho, \sigma\geq {\mathsf 0}$, with ${\mathsf tr}\,\varrho<\infty$ and ${\mathsf tr}\,\sigma<\infty$. Then
\begin{equation}\label{aux00}
  \bigl\|\sqrt{\varrho}-\sqrt{\sigma}\,\bigr\|_{\mathsf{ H.S.}}^2\leq \|\varrho-\sigma\|_1
\end{equation}
\end{lemma}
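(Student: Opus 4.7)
The inequality \eqref{aux00} is the classical Powers--Størmer inequality. My plan is to establish it via a direct algebraic argument based on the identity
\begin{equation*}
2(\varrho-\sigma) = AB + BA, \qquad A = \sqrt{\varrho}-\sqrt{\sigma}, \quad B = \sqrt{\varrho}+\sqrt{\sigma},
\end{equation*}
combined with the Jordan decomposition of the self-adjoint Hilbert--Schmidt operator $A$. Note that $A=A^*$, $B=B^*\geq {\mathsf 0}$, and both $A$ and $B$ are of trace class on the finite-dimensional support, with $A\in {\mathsf{H.S.}}({\mathcal H})$ since $\|\sqrt{\varrho}\|_{\mathsf{H.S.}}^2=\mathsf{tr}\,\varrho<\infty$ and analogously for $\sigma$.

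First I would write $A=A_+-A_-$ with $A_\pm\geq {\mathsf 0}$ and let $p_\pm$ denote the support orthoprojections of $A_\pm$, so that $p_+p_-={\mathsf 0}$, $p_\pm A=Ap_\pm=\pm A_\pm$, and $\|A\|_{\mathsf{H.S.}}^2 =\mathsf{tr}(A_+^2)+\mathsf{tr}(A_-^2)$. Multiplying the identity by $p_+$ on both sides and using cyclicity of the trace yields
\begin{equation*}
\mathsf{tr}\bigl(p_+(\varrho-\sigma)\bigr)=\mathsf{tr}\bigl(p_+(\varrho-\sigma)p_+\bigr)=\mathsf{tr}(A_+B),
\end{equation*}
and analogously $\mathsf{tr}\bigl(p_-(\varrho-\sigma)\bigr)=-\mathsf{tr}(A_-B)$. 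Substituting $B=A+2\sqrt{\sigma}$ and $B=-A+2\sqrt{\varrho}$ respectively, and using $\mathsf{tr}(A_+A)=\mathsf{tr}(A_+^2)$ together with the non-negativity of $\mathsf{tr}(A_\pm\sqrt{\varrho})$ and $\mathsf{tr}(A_\pm\sqrt{\sigma})$ (traces of products of two positive operators), I obtain the key estimates
\begin{equation*}
\mathsf{tr}(A_+^2)\leq \mathsf{tr}\bigl(p_+(\varrho-\sigma)\bigr), \qquad \mathsf{tr}(A_-^2)\leq -\mathsf{tr}\bigl(p_-(\varrho-\sigma)\bigr).
\end{equation*}

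Summing the two inequalities gives
\begin{equation*}
\|\sqrt{\varrho}-\sqrt{\sigma}\,\|_{\mathsf{H.S.}}^2 \leq \mathsf{tr}\bigl((p_+-p_-)(\varrho-\sigma)\bigr).
\end{equation*}
Since $p_+$ and $p_-$ are mutually orthogonal, the self-adjoint operator $p_+-p_-$ is a contraction, $\|p_+-p_-\|\leq 1$. Combined with the standard duality estimate $|\mathsf{tr}(xy)|\leq \|x\|\,\|y\|_1$ for $x\in {\mathsf{B}}({\mathcal H})$ and $y$ of trace class, this yields the bound $\mathsf{tr}((p_+-p_-)(\varrho-\sigma))\leq \|\varrho-\sigma\|_1$, which is \eqref{aux00}.

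There is no substantial obstacle in this argument; the only point to be attentive to is the validity of the trace-cyclicity manipulations and of the identity $2(\varrho-\sigma)=AB+BA$, both of which follow at once from $\varrho=\sqrt{\varrho}^{\,2}$, $\sigma=\sqrt{\sigma}^{\,2}$ and the trace-class property of $\sqrt{\varrho}, \sqrt{\sigma}$ ensuring that all products appearing under the trace are themselves of trace class.
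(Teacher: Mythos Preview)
Your proof is correct and follows essentially the same route as the paper's: both use the Jordan decomposition $A=A_+-A_-$ of $\sqrt{\varrho}-\sqrt{\sigma}$, establish the two key estimates $\mathsf{tr}\bigl(p_+(\varrho-\sigma)\bigr)\geq \mathsf{tr}(A_+^2)$ and $-\mathsf{tr}\bigl(p_-(\varrho-\sigma)\bigr)\geq \mathsf{tr}(A_-^2)$, and conclude via $\|p_+-p_-\|\leq 1$ and trace--norm duality. The only difference is cosmetic: the paper carries out the same computation by expanding over an eigenbasis of $A$ using the identity $\varrho-\sigma=\sqrt{\varrho}\,A+A\sqrt{\sigma}$, whereas you work directly at the operator level with the symmetrized identity $2(\varrho-\sigma)=AB+BA$; your version is a bit cleaner (one minor wording slip: the supports of $A_\pm$ need not be finite-dimensional, but this is irrelevant since $A_\pm,\sqrt{\varrho},\sqrt{\sigma}$ are all Hilbert--Schmidt and the products you trace are genuinely trace-class).
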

\begin{proof}
Remark that by supposition both $\varrho$ and $\sigma$ are compact linear operators over ${\mathcal H}$. To be non-trivial suppose $\varrho\not=\sigma$, which by uniqueness of the square root then implies $\sqrt{\varrho}\not=\sqrt{\sigma}$. Also, under the given premises the hermitian linear operator $$C=\sqrt{\varrho}-\sqrt{\sigma}$$ has to be compact, too. Let $P_+$ and $P_-$ be the mutual orthogonal support orthoprojections of the positive and negative parts $C_+\geq {\mathsf 0}$ and $C_-\geq {\mathsf 0}$ of the canonical decomposition $C=C_+-C_-$. In case this makes sense, let $\{\varphi^+_k\}$ and $\{\varphi^-_k\}$ be complete orthonormal systems of eigenvectors within $P_+{\mathcal H}$ and $P_-{\mathcal H}$ of $C_+$ and $C_-$, respectively, with eigenvalue sequences $\{\lambda^+_k\}\subset {\mathbb{R}}_+\backslash \{0\}$ and $\{\lambda^-_k\}\subset {\mathbb{R}}_+\backslash \{0\}$. In case of $C_+\not={\mathsf 0}$, the $\lambda^+_k$'s are the non-zero positive eigenvalues of $C$ , whereas in case of $C_-\not={\mathsf 0}$  the $\lambda^-_k$'s are the moduli of the non-zero negative eigenvalues of $C$ . With the help of the obvious operator identity $\varrho-\sigma=\sqrt{\varrho}\,C+C\sqrt{\sigma}$ and the fact, that in case of $C_+\not={\mathsf 0}$ from $C\,\varphi^+_k=\lambda^+_k\,\varphi^+_k$
\begin{equation*}
    \langle \sqrt{\varrho}\,\varphi^+_k,\varphi^+_k\rangle=\lambda^+_k+\langle \sqrt{\sigma}\,\varphi^+_k,\varphi^+_k\rangle
\end{equation*}
is obtained,
we may conclude as follows:
\begin{eqnarray*}
  {\mathsf{tr}}\,(\varrho-\sigma)P_+ &=& \sum_k \langle (\varrho-\sigma)\,\varphi^+_k,\varphi^+_k\rangle \\
  &=& \sum_k \langle \sqrt{\varrho}\,C\,\varphi^+_k,\varphi^+_k\rangle+\sum_k\langle C\,\sqrt{\sigma}\,\varphi^+_k,\varphi^+_k\rangle\\
   &=& \sum_k \lambda^+_k\Bigl(\langle \sqrt{\varrho}\,\varphi^+_k,\varphi^+_k\rangle+\langle \sqrt{\sigma}\,\varphi^+_k,\varphi^+_k\rangle \Bigr)\\
   &=& \sum_k \lambda^+_k\Bigl(\lambda^+_k+2\langle \sqrt{\sigma}\,\varphi^+_k,\varphi^+_k\rangle \Bigr)\\
   &\geq & \sum_k {\lambda^+_k}^2 ={\mathsf{tr}}\,C_+^2= \|C_+\|_{\mathsf{ H.S.}}^2
\end{eqnarray*}
Note that in case of $C_-={\mathsf 0}$, owing to  $P_- ={\mathsf 0}$ this is the same as $${\mathsf{tr}}\,(\varrho-\sigma)(P_+-P_-)\geq \|C\|_{\mathsf{ H.S.}}^2$$
For $C_-\not={\mathsf 0}$, from the same operator identity and since from $C\,\varphi^-_k=-\lambda^-_k\,\varphi^-_k$
\begin{equation*}
    \langle \sqrt{\sigma}\,\varphi^-_k,\varphi^-_k\rangle=\lambda^-_k+\langle \sqrt{\varrho}\,\varphi^-_k,\varphi^-_k\rangle
\end{equation*}
follows, the conclusion now can be led as follows:
\begin{eqnarray*}
  -{\mathsf{tr}}\,(\varrho-\sigma)P_- &=& -\sum_k \langle (\varrho-\sigma)\,\varphi^-_k,\varphi^-_k\rangle\\
  & = &-\sum_k \langle \sqrt{\varrho}\,C\,\varphi^-_k,\varphi^-_k\rangle-\sum_k \langle C\,\sqrt{\sigma}\,\varphi^-_k,\varphi^-_k\rangle\\
   &=& \sum_k \lambda^-_k\Bigl(\langle \sqrt{\varrho}\,\varphi^-_k,\varphi^-_k\rangle+\langle \sqrt{\sigma}\,\varphi^-_k,\varphi^-_k\rangle \Bigr)\\
   &=& \sum_k \lambda^-_k\Bigl(\lambda^-_k+2\langle \sqrt{\varrho}\,\varphi^+_k,\varphi^+_k\rangle \Bigr)\\
   &\geq & \sum_k {(\lambda^-_k)}^2 ={\mathsf{tr}}\,C_-^2= \|C_-\|_{\mathsf{ H.S.}}^2
\end{eqnarray*}
Note that in case of $C_+={\mathsf 0}$, owing to  $P_+ ={\mathsf 0}$ this is the same as $${\mathsf{tr}}\,(\varrho-\sigma)(P_+-P_-)\geq \|C\|_{\mathsf{ H.S.}}^2$$
In case that both $C_+$ and $C_-$ are non-zero, by taking together both the above derived groups of estimates we can conclude that
\begin{equation*}
    {\mathsf{tr}}\,(\varrho-\sigma)(P_+-P_-)\geq {\mathsf{tr}}\,(C_+^2+C_-^2)={\mathsf{tr}}\,(C_+ -C_-)^2={\mathsf{tr}}\,C^2=\|C\|_{\mathsf{ H.S.}}^2
\end{equation*}
Thus, in view of the above this estimate is seen to be generally valid.
Remark that from $\|\varrho-\sigma\|_1=\sup_{\|Y\|\leq 1} |{\mathsf{tr}}\,(\varrho-\sigma)Y |$ and since both $\varrho-\sigma$ and $X=P_+-P_-$ are hermitian, and since $\|X\|\leq 1$ holds, one can conclude that
$${\mathsf{tr}}\,(\varrho-\sigma)(P_+-P_-)\leq \sup_{\|Y\|\leq 1} |{\mathsf{tr}}\,(\varrho-\sigma)Y |=\|\varrho-\sigma\|_1$$
From this together with the previous finally the validity of \eqref{aux00} follows.
\end{proof}
In the following, an arbitrary parameterized $C^1$-curve  $\gamma:  I\ni t\longmapsto \varrho_t\in {\mathcal S}_0({\mathsf B}({\mathcal H}))$ will be considered around a
given fixed parameter value $s\in I$. The result \eqref{aux00} can be applied with $\varrho=\varrho_t$ and $\sigma=\varrho_s$ and then is showing that $t\mapsto \sqrt{\varrho_t}$ will be continuous with respect to the Hilbert-Schmidt norm. Thus
\begin{equation}\label{trajekt}
I\ni t\longmapsto \sqrt{\varrho_t}\in {\mathsf{ H.S.}}({\mathcal H})
\end{equation}
is a continuous implementation of $\gamma$ in the space of Hilbert-Schmidt operators over ${\mathcal H}$ in the sense used in this paper. Since the Hilbert-Schmidt topology is stronger than the operator norm topology, for each $s\in I$ in a uniform sense one has
\begin{equation}\label{wurzcont}
    \lim_{r\to s} \sqrt{\varrho_r}=\sqrt{\varrho_s}
\end{equation}
Since the parametrization $t\mapsto \varrho_t$ is differentiable, the question arises under which conditions  \eqref{trajekt} gets differentiable. This question will be considered now.

To start with, for $r\in I\backslash \{s\}$, define
\begin{subequations}\label{diffquot}
\begin{eqnarray}\label{diffquot1}
\Delta_r\varrho_s &=& (r-s)^{-1}(\varrho_r-\varrho_s)\\
\label{diffquot2}
\Delta_r\sqrt{\varrho_s} &=& (r-s)^{-1}(\sqrt{\varrho_r}-\sqrt{\varrho_s})
\end{eqnarray}
\end{subequations}
Let $s(\varrho_s)$ be the support orthoprojections of $\varrho_s$, and $\{\varphi_k\}$ be a complete orthonormal system of eigenvectors of $\varrho_s$ within $s(\varrho_s)\,{\mathcal H}$. Also, we let $\{e_k\}$ be the corresponding  system of mutually orthogonal one-dimensional orthoprojections with $e_k\,\varphi_k=\varphi_k$, for all $k$, and which are summing up to $s(\varrho_s)$, $\sum_k e_k=s(\varrho_s)$. Let $\{\lambda_k(s)\}$ be the decreasingly ordered sequence of non-zero eigenvalues of $\varrho_s$, that is  $\varrho_s\,\varphi_k=\lambda_k(s)\varphi_k$ be fulfilled, with each eigenvalue repeated according to its multiplicity. Then, for each set $J$  of subscripts obeying $\# J<\infty$ and $J\subset \{1,2,3,\ldots, {\mathsf {dim}}\,s(\varrho_s)\,{\mathcal H}\}$ we let the orthoprojection $E(J)$ be defined by
$
E(J)=\sum_{k\in J} e_k
$. In the special case of  $J(n)=\{1,2,3,\ldots,n\}$  the abbreviation $E_n=E(J(n))$ will be used.
\begin{lemma}\label{diffex}
 For each finite subset $J\subset {{\mathbb{N}}}$ the map $I\ni t\longmapsto \sqrt{\varrho_t}\,E(J)\in {\mathsf{ H.S.}}({\mathcal H})$ for $r\in I\backslash \{s\}$
 has difference quotient $\Delta_r\sqrt{\varrho_s}\,E(J)$ given by
 \begin{subequations}\label{wurzab}
 \begin{equation}\label{diffq}
 \Delta_r\sqrt{\varrho_s}\,E(J)\phantom{\,\biggl|_{t=s}} = \int_0^\infty \exp(-t\,\sqrt{\varrho_r})\,\Delta_r\varrho_s\,\exp(-t\,\sqrt{\varrho_s})\,E(J)\,d\/t
 \end{equation}
 and is differentiable, with  derivative at $t=s$ reading as
 \begin{equation}\label{diffformu}
     \frac{d}{d\/t}\,\sqrt{\varrho_t}\,E(J)\,\biggl|_{t=s} =\int_0^\infty \exp(-t\,\sqrt{\varrho_s})\,\,\,\,\varrho_s^{\,\prime}\,\,\,\exp(-t\,\sqrt{\varrho_s})\,E(J)\,d\/t
 \end{equation}
 \end{subequations}
\end{lemma}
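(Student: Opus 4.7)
The plan is to first establish the integral representation \eqref{diffq} by constructing a Hilbert–Schmidt valued primitive, and then to pass to the limit $r\to s$ by dominated convergence, exploiting the exponential decay afforded by the restriction to $E(J)$.

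For the first formula, fix $r\in I\backslash\{s\}$ and consider the map
$F_r\colon[0,\infty)\to{\mathsf{H.S.}}({\mathcal H})$ defined by
$F_r(t)=e^{-t\sqrt{\varrho_r}}\bigl(\sqrt{\varrho_r}-\sqrt{\varrho_s}\bigr)e^{-t\sqrt{\varrho_s}}E(J).$
Since $\sqrt{\varrho_r}$ commutes with $e^{-t\sqrt{\varrho_r}}$ (and similarly for $s$), direct differentiation gives
$F_r'(t)=-e^{-t\sqrt{\varrho_r}}\bigl[\sqrt{\varrho_r}(\sqrt{\varrho_r}-\sqrt{\varrho_s})+(\sqrt{\varrho_r}-\sqrt{\varrho_s})\sqrt{\varrho_s}\bigr]e^{-t\sqrt{\varrho_s}}E(J),$ and the bracket collapses to $\varrho_r-\varrho_s$. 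Because $E(J)$ is a finite sum of eigenprojections of $\varrho_s$ corresponding to the strictly positive eigenvalues $\{\lambda_k(s):k\in J\}$, with $\mu=\min_{k\in J}\lambda_k(s)>0$, one has
$\|e^{-t\sqrt{\varrho_s}}E(J)\|_{\mathsf{H.S.}}^{2}=\sum_{k\in J}e^{-2t\sqrt{\lambda_k(s)}}\leq \#J\cdot e^{-2t\sqrt{\mu}}.$
Combined with $\|e^{-t\sqrt{\varrho_r}}\|\leq 1$ this forces $F_r(t)\to 0$ in Hilbert–Schmidt norm as $t\to\infty$, so by the Newton–Leibniz rule for the HS-valued primitive,
$-F_r(0)=-\bigl(\sqrt{\varrho_r}-\sqrt{\varrho_s}\bigr)E(J)=\int_0^\infty F_r'(t)\,d\/t.$
Dividing by $r-s$ yields \eqref{diffq}.

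For the second formula, one passes to the limit $r\to s$ inside the integral representation of $\Delta_r\sqrt{\varrho_s}\,E(J)$ by dominated convergence in $L^{1}([0,\infty);{\mathsf{H.S.}}({\mathcal H}))$. An integrable majorant is produced by the triangle estimate
$\|e^{-t\sqrt{\varrho_r}}\Delta_r\varrho_s\,e^{-t\sqrt{\varrho_s}}E(J)\|_{\mathsf{H.S.}}\leq\|\Delta_r\varrho_s\|\cdot\sqrt{\#J}\,e^{-t\sqrt{\mu}},$
where the $C^1$-assumption on $\gamma$ in ${\mathcal S}_0({\mathsf B}({\mathcal H}))$ guarantees that $\|\Delta_r\varrho_s\|$ stays bounded as $r\to s$. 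For pointwise convergence of the integrand to $e^{-t\sqrt{\varrho_s}}\,\varrho_s^{\,\prime}\,e^{-t\sqrt{\varrho_s}}E(J)$, one splits the difference into two parts: the first captures the variation of the outer exponential, bounded by $\|e^{-t\sqrt{\varrho_r}}-e^{-t\sqrt{\varrho_s}}\|\leq t\|\sqrt{\varrho_r}-\sqrt{\varrho_s}\|$ (Duhamel) times the HS-decay factor, and the second captures $\|(\Delta_r\varrho_s-\varrho_s^{\,\prime})\|$ times the HS-decay factor. Lemma~\ref{wurzeleins} (via $\|\cdot\|\leq\|\cdot\|_{{\mathsf{H.S.}}}$) and the $C^{1}$-hypothesis (giving $\Delta_r\varrho_s\to\varrho_s^{\,\prime}$ in trace norm, hence in operator norm) drive both terms to $0$.

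The main obstacle is the dominated-convergence step: because $\varrho_r$ and $\varrho_s$ do not commute, the two exponential factors cannot be merged, and the integrand must be estimated in the mixed form $\|e^{-t\sqrt{\varrho_r}}\,(\cdot)\,e^{-t\sqrt{\varrho_s}}E(J)\|_{\mathsf{H.S.}}$. The cleanest route is the ideal property $\|ABC\|_{\mathsf{H.S.}}\leq\|A\|\,\|B\|\,\|C\|_{\mathsf{H.S.}}$, placing all the HS-content into the right factor where the uniform exponential bound of Step 2 becomes available; this is precisely where the finiteness of $J$ together with the fact that every $\lambda_k(s)$ with $k\in J$ is strictly positive does the essential work.
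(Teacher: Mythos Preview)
Your proof is correct, but it follows a genuinely different route from the paper's. The paper proceeds eigenprojection-by-eigenprojection: multiplying the identity
\[
\Delta_r\varrho_s=\sqrt{\varrho_r}\,(\Delta_r\sqrt{\varrho_s})+(\Delta_r\sqrt{\varrho_s})\,\sqrt{\varrho_s}
\]
by $e_k$ from the right turns it into a Sylvester-type equation whose left factor $\sqrt{\varrho_r}+\sqrt{\lambda_k(s)}\,{\mathsf 1}$ is boundedly invertible. This gives
\[
\Delta_r\sqrt{\varrho_s}\,e_k=\bigl(\sqrt{\varrho_r}+\sqrt{\lambda_k(s)}\,{\mathsf 1}\bigr)^{-1}\Delta_r\varrho_s\,e_k,
\]
after which the integral formula is just the Laplace representation of the resolvent, and the limit $r\to s$ comes for free from continuity of $a\mapsto a^{-1}$ together with \eqref{wurzcont}; finally one sums over $k\in J$.

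Your approach instead works directly with $E(J)$: you build a Hilbert--Schmidt valued primitive and apply the Newton--Leibniz rule, then pass to the limit by dominated convergence, using Duhamel for the pointwise piece and the spectral gap $\mu=\min_{k\in J}\lambda_k(s)>0$ for the majorant. The paper's resolvent argument is slicker for the limit step (no Duhamel, no dominated convergence needed), while your method is perhaps more transparent about \emph{why} the finiteness of $J$ is essential---it is exactly what provides the integrable exponential envelope.
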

\begin{proof}
With the expressions defined in \eqref{diffquot} the following identity is satisfied
\begin{equation}\label{opeq1}
 \Delta_r\varrho_s=\sqrt{\varrho_r}\, \bigl(\Delta_r\sqrt{\varrho_s}\bigr)+\bigl(\Delta_r\sqrt{\varrho_s}\bigr)\,\sqrt{\varrho_s}
\end{equation}
Now, if \eqref{opeq1} is multiplied by $e_k$ from the right, then the following is obtained
\begin{eqnarray*}
  \Delta_r\varrho_s\,e_k &=& \sqrt{\varrho_r}\, \bigl(\Delta_r\sqrt{\varrho_s}\,e_k\bigr)+\bigl(\Delta_r\sqrt{\varrho_s}\,e_k\bigr)\,\sqrt{\lambda_k(s)} \\
   &=& \Bigl( \sqrt{\varrho_r}+\sqrt{\lambda_k(s)}\,{\mathsf 1}\Bigr) \bigl(\Delta_r\sqrt{\varrho_s}\,e_k\bigr)
\end{eqnarray*}
Since $\sqrt{\varrho_r}$ is positive semi-definite and $-\sqrt{\lambda_k(s)}$ belongs to the resolvent set of each of $\sqrt{\varrho_r}$, the above relation can be resolved for $\Delta_r\sqrt{\varrho_s}\,e_k$. The result is
\begin{subequations}\label{grunds0}
\begin{eqnarray}\label{grunds0a}
    \Delta_r\sqrt{\varrho_s}\,e_k&=&\Bigl( \sqrt{\varrho_r}+\sqrt{\lambda_k(s)}\,{\mathsf 1}\Bigr)^{-1} \Delta_r \varrho_s\,e_k\\
 \nonumber
   &=& \int_0^\infty \exp(-t\,\bigl(\sqrt{\varrho_r}+\sqrt{\lambda_k(s)}\,{\mathsf 1}\bigr))\,\Delta_r \varrho_s\,e_k \,d\/t\\
    \nonumber
      &=& \int_0^\infty \exp(-t\,\sqrt{\varrho_r})\,\Delta_r \varrho_s \exp(-t\,\sqrt{\lambda_k(s)})\,e_k \,d\/t\\
   \label{grunds0b}
   &=& \int_0^\infty \exp(-t\,\sqrt{\varrho_r})\,\Delta_r \varrho_s\exp(-t\,\sqrt{\varrho_s})\,e_k\,d\/t
\end{eqnarray}
\end{subequations}
where equivalence of \eqref{grunds0a} with \eqref{grunds0b} follows by functional calculus along the given intermediate steps.
Now, by assumption $\lim_{r\to s} \Delta_r\varrho=\varrho_s^{\,\prime}$
exists and the map $$I\ni t \mapsto \sqrt{\varrho_t}$$ according to  \eqref{wurzcont} is continuous. With the help of the latter and since the map $a\mapsto a^{-1}$ is continuous over the invertible elements $a\in {\mathsf B}({\mathcal H})$, from \eqref{grunds0a} we conclude that the limit of $\Delta_r\sqrt{\varrho_s}\,e_k$ as $r\to s$ exists and obviously for each $k$ reads
\begin{subequations}\label{grunds1}
\begin{eqnarray}\label{grunds2}
   \lim_{r\to s}\Delta_r\sqrt{\varrho_s}\,e_k &=& \Bigl( \sqrt{\varrho_s}+\sqrt{\lambda_k(s)}\,{\mathsf 1}\Bigr)^{-1} \varrho_s^{\,\prime}\,e_k \\
   \label{grunds3}
   &=& \int_0^\infty \exp(-t\,\sqrt{\varrho_s})\,\varrho_s^{\,\prime}\,\exp(-t\,\sqrt{\varrho_s})\,e_k\,d\/t
\end{eqnarray}
\end{subequations}
Thereby,  equivalence of \eqref{grunds2} with \eqref{grunds3} follows by a reasoning similar to the one used in context of \eqref{grunds0}.
Finally, in summing up within  \eqref{grunds0b} and \eqref{grunds3} over all $k\in J$ both the formulae of \eqref{wurzab} are obtained.
\end{proof}
For ${{\mathsf{dim}}}\,{\mathcal H}=m<\infty$, one has  $s(\varrho_s)=E(J)$, with the special choice  $J=J(k)$ and   $k={{\mathsf{dim}}}\,s(\varrho_s){\mathcal H}$, and then  \eqref{diffformu} reads
\begin{equation}\label{suppspec0}
    \frac{d}{d\/t}\,\sqrt{\varrho_t}\,s(\varrho_s)\,\biggl|_{t=s}=\int_0^\infty \exp(-t\,\sqrt{\varrho_s})\,\varrho_s^{\,\prime}\,\exp(-t\,\sqrt{\varrho_s})\,s(\varrho_s)\,d\/t
\end{equation}
Thus, in finite dimensional cases the right hand side integral exists.  If $\varrho_t$ evolves in the set density operators of full support, this has an interesting consequence.
\begin{theorem}\label{enddiffex}
Suppose ${{\mathsf{dim}}}\,{\mathcal H}<\infty$. For each  parameterized $C^1$-curve of density operators of full support $\gamma:  I\ni t\longmapsto \varrho_t\in {\mathcal S}_0^{\mathsf{faithful}}({\mathsf B}({\mathcal H}))$ the implementation $$I\ni t\longmapsto \sqrt{\varrho_t}\in {{\mathsf{H.S.}}}({\mathcal H})$$ is differentiable at each $t=s$, with the derivative
\begin{equation}\label{suppspec}
   \sqrt{\varrho_s}^{\,\,\prime}= \frac{d}{d\/t}\,\sqrt{\varrho_t}\,\biggl|_{t=s}=\int_0^\infty \exp(-t\,\sqrt{\varrho_s})\,\varrho_s^{\,\prime}\,\exp(-t\,\sqrt{\varrho_s})\,d\/t
\end{equation}
\end{theorem}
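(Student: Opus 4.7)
The plan is to deduce Theorem \ref{enddiffex} as an immediate corollary of Lemma \ref{diffex}, exploiting the two simplifications available in the present context: finite dimensionality and full support.

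First I would observe that since ${\mathsf{dim}}\,{\mathcal H}=m<\infty$ and each $\varrho_t$ has full support, the support orthoprojection $s(\varrho_s)$ coincides with the identity operator ${\mathsf 1}$. Fixing $s\in I$ and taking the complete orthonormal eigenbasis $\{\varphi_k\}_{k=1}^m$ of $\varrho_s$ introduced prior to Lemma \ref{diffex}, the associated finite index set $J=J(m)=\{1,2,\ldots,m\}$ yields $E(J)=\sum_{k=1}^m e_k = s(\varrho_s)={\mathsf 1}$. Consequently $\sqrt{\varrho_t}\,E(J)=\sqrt{\varrho_t}$ for every $t\in I$, so the parameterized family $t\mapsto \sqrt{\varrho_t}\,E(J)$ of Lemma \ref{diffex} is literally identical to the implementation $t\mapsto\sqrt{\varrho_t}$ under consideration.

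The second step is then purely formal: applying Lemma \ref{diffex} to this choice of $J$, formula \eqref{diffformu} at once gives
\[
\frac{d}{d\/t}\,\sqrt{\varrho_t}\,\biggl|_{t=s}=\frac{d}{d\/t}\,\sqrt{\varrho_t}\,E(J)\,\biggl|_{t=s}=\int_0^\infty \exp(-t\,\sqrt{\varrho_s})\,\varrho_s^{\,\prime}\,\exp(-t\,\sqrt{\varrho_s})\,E(J)\,d\/t,
\]
and using $E(J)={\mathsf 1}$ this reduces precisely to \eqref{suppspec}. Existence of the integral on the right-hand side is automatic in finite dimensions, since $\varrho_s$ is strictly positive with smallest eigenvalue $\lambda_m(s)>0$, and hence $\|\exp(-t\sqrt{\varrho_s})\|\leq \exp(-t\sqrt{\lambda_m(s)})$ ensures absolute convergence in operator norm (and \emph{a fortiori} in Hilbert-Schmidt norm, since all norms on ${\mathsf B}({\mathcal H})$ are equivalent when ${\mathsf{dim}}\,{\mathcal H}<\infty$).

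There is essentially no obstacle here: the real substance was already carried out in Lemma \ref{diffex} via the resolvent identity \eqref{grunds0a} and functional calculus. The role of Theorem \ref{enddiffex} is to package that lemma in the special case where the auxiliary cut-off $E(J)$ collapses to the identity, which is precisely the combined effect of the finite-dimensionality hypothesis and the full-support hypothesis. The only point that warrants a brief mention in the written proof is why the sum over $k$ in Lemma \ref{diffex} can be taken all the way up to $m={\mathsf{dim}}\,{\mathcal H}$, i.e.\ why no eigenvalues $\lambda_k(s)=0$ appear that would spoil the validity of \eqref{grunds0b} -- and this is exactly the content of the faithfulness assumption $\varrho_s\in {\mathcal S}_0^{\mathsf{faithful}}({\mathsf B}({\mathcal H}))$.
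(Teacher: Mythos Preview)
Your proposal is correct and follows essentially the same approach as the paper: the paper also derives Theorem \ref{enddiffex} by specializing Lemma \ref{diffex} to the case $E(J)=s(\varrho_s)={\mathsf 1}$, which is exactly the reduction you carry out (cf.\ the discussion surrounding \eqref{suppspec0} just before the theorem). Your additional remark on convergence of the integral via the smallest eigenvalue bound is a welcome elaboration but not strictly needed, since this is already contained in Lemma \ref{diffex}.
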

Let us see what happens  if   ${{\mathsf{dim}}}\,{\mathcal H}=\infty$ is fulfilled, under the same suppositions about $\gamma$ as in the previous theorem. Then, for each  increasingly directed sequence $\{J_k\}$ of finite subsets $J_k\subset {{\mathbb{N}}}$ with   $\bigcup_k J_k={{\mathbb{N}}}$ we have $\mathsf{l.u.b}\,\{E(J_k)\}= {\mathsf 1}$.  Thus, for $\Delta_r\sqrt{\varrho_s}$ the following expression is obtained from formulae \eqref{diffq}:
\begin{equation}\label{aufsupp1}
\Delta_r\sqrt{\varrho_s}= \lim_{k\to\infty}\int_0^\infty \exp(-t\,\sqrt{\varrho_r})\,\Delta_r\varrho_s\,\exp(-t\,\sqrt{\varrho_s})\,E(J_k)\,d\/t
\end{equation}
 The limit is existing with respect to the  Hilbert-Schmidt norm. Hence, the special implementation $t\mapsto \sqrt{\varrho_t}$ of $\gamma$ in this case possesses a derivative if, and only if, the following double limit exists (and the value of which then yields the derivative):
 \begin{equation}\label{deri}
\sqrt{\varrho_s}^{\,\,\prime}=\lim_{r\to s} \lim_{k\to\infty}  \int_0^\infty   \exp(-t\,\sqrt{\varrho_r})\,\Delta_r\varrho_s\,\exp(-t\,\sqrt{\varrho_s})\,E(J_k)\,d\/t
 \end{equation}
On the other hand, if the derivative exists, i.e. if the limit  $\sqrt{\varrho_s}^{\,\,\prime}= \lim_{r\to s} \Delta_r \sqrt{\varrho_s}$ exists, then owing to ${\mathsf{l.u.b}}\,\{E(J_k)\}= {\mathsf 1}$ once more again, from \eqref{diffformu} we infer that
$$\sqrt{\varrho_s}^{\,\,\prime}=\lim_{k\to\infty} \sqrt{\varrho_s}^{\,\,\prime}E(J_k)=\lim_{k\to\infty} \frac{d}{d\/t}\,\sqrt{\varrho_t}\,E(J_k)\,\biggl|_{t=s}=\lim_{r\to s} \Delta_r \sqrt{\varrho_s}$$
Hence, in view of \eqref{deri}, \eqref{diffformu} the existence of $\sqrt{\varrho_s}^{\,\,\prime}$ is equivalent to the condition
\begin{equation}\label{ness}
\begin{split}
\lim_{r\to s} \lim_{k\to\infty} & \int_0^\infty   \exp(-t\,\sqrt{\varrho_r})\,\Delta_r\varrho_s\,\exp(-t\,\sqrt{\varrho_s})\,E(J_k)\,d\/t  = \\
& = \lim_{k\to\infty} \int_0^\infty \exp(-t\,\sqrt{\varrho_s})\,\,\varrho_s^{\,\prime}\,\exp(-t\,\sqrt{\varrho_s})\,E(J_k)\,d\/t=\\
&\phantom{\lim_{k\to\infty}} =\lim_{k\to\infty} \lim_{r\to s}  \int_0^\infty  \exp(-t\,\sqrt{\varrho_r})\,\Delta_r\varrho_s\,\exp(-t\,\sqrt{\varrho_s})\,E(J_k)\,d\/t
\end{split}
\end{equation}
Thus, existence of the double limit in \eqref{deri} implies the limiting procedure to be even invariant against reversing the order of the partial limits $\lim_{r\to s}$ and $\lim_{k\to\infty}$, and according to \eqref{ness} the derivative equally well may be calculated by the formula
\begin{equation}\label{deri1}
 \sqrt{\varrho_s}^{\,\,\prime}=\lim_{k\to\infty} \int_0^\infty \exp(-t\,\sqrt{\varrho_s})\,\,\varrho_s^{\,\prime}\,\exp(-t\,\sqrt{\varrho_s})\,E(J_k)\,d\/t
\end{equation}
We now will take the following special case of $\varrho_t$. Suppose there is a strictly ascending sequence $\{P_k\}$ of orthoprojections of finite rank obeying   ${\mathsf{l.u.b.}} \{P_k\}={\mathsf 1}$ and
\begin{subequations}\label{comm0}
\begin{equation}\label{comm}
\forall\,k : \varrho_t P_k=P_k \varrho_t, \,\forall\,t\in I
\end{equation}
When  defining $\Delta P_1=P_1$, and $\Delta P_k=P_k-P_{k-1}$ for $k>1$, then
$\{\Delta P_k\}$ is a decomposition of the unity into mutually orthogonal orthoprojections of finite rank which are obeying
\begin{equation}\label{comm1a}
\forall\,k : \varrho_t\, \Delta P_k=\Delta P_k\, \varrho_t, \,\forall\,t\in I
\end{equation}
\end{subequations}
Clearly, we then may think of $\{e_k\}$ to be chosen in accordance with the decomposition $\{\Delta P_k\}$. That is, there exists a  decomposition $\{S_k\}$ of ${{\mathbb{N}}}$ into mutually disjoint non-void finite subsets $S_k$ such that $\Delta P_k=E(S_k)$, and $$P_n=E(J_n)=\sum_{k\leq n} \Delta P_k$$ for $J_n=\bigcup_{k\leq n} S_k$. Especially, provided that \eqref{comm} is fulfilled,  formula \eqref{deri1} takes the form $\sqrt{\varrho_s}^{\,\prime}=y(s)$, with the Hilbert-Schmidt operator
\[
y(s)=\sum_{k=1}^\infty\int_0^\infty \exp(-t\,\sqrt{\varrho_s})\,\varrho_s^{\,\prime}\,\exp(-t\,\sqrt{\varrho_s})\,\Delta P_k\,d\/t
\]
Since the orthoprojections $\Delta P_k$ are commuting with each single factor of the expression under the integral, we arrive at a necessary condition for the derivative to exist:
\begin{equation}\label{hscond}
\begin{split}
\sum_{k=1}^\infty\bigl\|\frac{d}{d\/t}\,\sqrt{\varrho_t}\,\Delta P_k\,&\biggl|_{t=s}\bigr\|_{\mathsf{H.S}}^2=\\
&\sum_{k=1}^\infty\bigl\|\int_0^\infty\exp(-t\,\sqrt{\varrho_s})\,\varrho_s^{\,\prime}\,\exp(-t\,\sqrt{\varrho_s})\,\Delta P_k\,d\/t\,\bigr\|_{\mathsf{H.S}}^2< \infty
\end{split}
\end{equation}
This can be usefully applied in order to construct counter-examples  showing that the assertion of Theorem \ref{enddiffex}  cannot be extended beyond the context of ${{\mathsf{dim}}}\,{\mathcal H}<\infty$.
\protect{
\begin{example}\label{gegendiff}
Let ${\mathcal H}={\bigoplus_{k=1}^\infty} {\mathbb C}^2$. For $t\in {\mathbb{R}}$ define $\varrho_t$ by
$
\varrho_t=\bigoplus_{k=1}^\infty \varrho_t^{(k)}
$
with $$\varrho_t^{(k)}= \frac{3}{2^{2k+1}}\,\begin{pmatrix}
1+\frac{1}{2}\,\sin{2^k\,t} & 0\\
0 & 1-\frac{1}{2}\,\sin{2^k\,t}
\end{pmatrix}$$
Then, $t\mapsto \varrho_t$ is differentiable, but the ${\mathsf{H.S.}}$- implementation $t\mapsto \sqrt{\varrho_t}$ is not.
\end{example}
}
\begin{proof}
It is easily seen that $t\mapsto \varrho_t\in {\mathcal S}_0^{\mathsf{faithful}}({\mathsf B}({\mathcal H}))$ is fulfilled, with derivative
\[
\varrho_t^{\,\prime}=\bigoplus_{k=1}^\infty \frac{3\,\cos{2^k\,t}}{2^{k+2}}\,\begin{pmatrix}
1 & \phantom{-}0\\
0 & -1
\end{pmatrix}\in {\mathsf{H.S}}({\mathcal H})
\]
Also, if $\Delta P_k$ is the orthoprojection projecting onto the $k^{th}$ component within  ${\mathcal H}={\bigoplus_{k=1}^\infty} {\mathbb C}^2$, then for $t\mapsto \varrho_t$ the  assumptions according to \eqref{comm0} in respect of the complete orthogonal  decomposition  $\{\Delta P_k\}$ of the unity into projections of rank two are fulfilled. Especially, a little calculation is showing that, for each subscript $k$, at $t=0$ we have
\[
\frac{d}{d\/t}\,\sqrt{\varrho_t}\,\Delta P_k\,\biggl|_{t=0}=\frac{1}{4}\,\sqrt{\frac{3}{2}}\,\begin{pmatrix}
1 & \phantom{-}0\\
0 & -1
\end{pmatrix}\
\]
Thus, $\bigl\|\frac{d}{d\/t}\,\sqrt{\varrho_t}\,\Delta P_k\,\biggl|_{t=0}\bigr\|_{\mathsf{H.S}}^2= \frac{3}{16}$ is independent from $k$, and therefore the necessary condition \eqref{hscond} for $\sqrt{\varrho_t}$ to possess a derivative at $t=0$ is violated.
\end{proof}
\begin{remark}\label{fini}
\begin{enumerate}
\item \label{fini1}
If $\gamma$ does not evolve in a set of density operators of constant support exclusively, then even in the finite dimensional case  differentiability of $t\mapsto \sqrt{\varrho_t}$ will fail. As an example consider the differentiable map
\[
\gamma: ]-1,1[\ni t\longmapsto \varrho_t=\begin{pmatrix} t^2 & 0\\
0 & 1-t^2
\end{pmatrix}
\]
into the set of $2\times 2$-density matrices, and which corresponds to the case of ${\mathcal H}={\mathbb C}^2$. Obviously, at $t=0$ the density matrix $\varrho_t$ has rank one instead of rank two at $t\not=0$ else. But then, since
\[
\sqrt{\varrho_t}=\begin{pmatrix} |t| & 0\\
0 & \sqrt{1-t^2}
\end{pmatrix}
\]
holds
$\sqrt{\varrho_t}$ obviously fails to possess a derivative at $t=0$.
\item \label{fini2} Suppose $\varrho$ is an $n\times\, n$-density matrix of full rank. Then, for each hermitian $n\times\, n$-matrix $b$ with ${{{\mathsf{tr}}}}\,b=0$, for some symmetric around $0$ interval $I\subset {\mathbb{R}}$, $$\gamma: I\ni t\mapsto \varrho_t=\varrho+t\,b$$ is a differentiable curve in the  $n\times\, n$-density matrices of full rank, with  $(\varrho_t)^{\,\prime}|_{t=0}=b$,   and then  accordingly the derivative of $\sqrt{\varrho_t}$ at $t=0$ exists and reads as  $$\Bigl(\sqrt{\varrho+t\,b}\Bigr)^{\,\prime}\bigl|_{t=0}=\int_0^\infty \exp(-t\,\sqrt{\varrho})\,b\,\exp(-t\,\sqrt{\varrho})\,d\/t$$
by Theorem \ref{enddiffex} and formula \eqref{suppspec}.
\end{enumerate}
\end{remark}
\section{Solutions of certain operator equations\hfill{}}\label{app_e}
\noindent
In the following, ${\mathcal H}$ be a separable Hilbert space. Then, within the bounded linear operators ${\mathsf B}({\mathcal H})$, let a decomposition $\{\Delta P_k\}$ of the unity into mutually orthogonal orthoprojections $\Delta P_k$ be fixed, with ${{\mathsf{dim}}}\,\Delta P_k{\mathcal H}<\infty$, for all $k\in {{\mathbb{N}}}$. Let $\varrho,  \omega$ be a density operator of full support and a self-adjoint linear operator of trace-class  over ${\mathcal H}$, respectively, and for which
\begin{equation}\label{ope1a}
\varrho\, \Delta P_k=\Delta P_k\varrho,\ \omega\, \Delta P_k=\Delta P_k\omega
\end{equation}
holds, for each $k$. In the following, we are considering two operator equations for which $\omega$ and $\varrho$ satisfying \eqref{ope1a} serve as input data. Whereas the first of these reads
\begin{subequations}\label{ope1}
\begin{equation}\label{ope1b}
\omega= a\sqrt{\varrho}+\sqrt{\varrho}\,a^*
\end{equation}
and is asking for a solution $a\in {\mathsf{H.S.}}({\mathcal H})$ which is satisfying the auxiliary condition \begin{equation}\label{ope1aa}
a\in \bigl[{\mathsf B}({\mathcal H})_{\mathsf h}\sqrt{\varrho}\,\bigr]_{{\mathsf{H.S.}}}
\end{equation}
\end{subequations}
the second type of operator equation  we will have in focus reads
\begin{subequations}\label{twoope}
\begin{equation}\label{twoope1}
\omega= x\varrho+\varrho x
\end{equation}
and is asking for solutions $x$ in the densely defined, self-adjoint linear operators on ${\mathcal H}$. To be a well-defined, the domain of definition ${\mathcal D}(x)$ has to obey the condition
\begin{equation}\label{twoope2}
\varrho\,{\mathcal D}(x)\subset {\mathcal D}(x)
\end{equation}
\end{subequations}
Start with the uniqueness question for solutions of problem \eqref{ope1}.
\begin{lemma}\label{unope1}
Provided a solution of \eqref{ope1} exists, then this is the unique solution.
\end{lemma}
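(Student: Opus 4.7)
The plan is to reduce the claimed uniqueness to the injectivity statement embedded in Lemma~\ref{euk1} and Corollary~\ref{vNT}, applied to the specific $^*$-representation of ${\mathsf B}({\mathcal H})$ by left multiplication on ${\mathsf{H.S.}}({\mathcal H})$ that was set up in subsection~\ref{beimatrix0} and used throughout the proof of Theorem~\ref{u1form2} (so that $\varphi=\sqrt{\varrho}$ implements the faithful normal state $\nu(\cdot)={{\mathsf{tr}}}\,\varrho(\cdot)$ as a vector state on the $vN$-algebra $N=\pi({\mathsf B}({\mathcal H}))$, and $[\pi({\mathsf B}({\mathcal H}))_{\mathsf h}\sqrt{\varrho}\,]_{\mathsf{H.S.}}=[N_{\mathsf h}\varphi]$). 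The crucial observation is that the auxiliary conditions \eqref{ope1a} play no role in uniqueness at all; they will only be needed for existence (and for matching with the ${\mathsf{d}}\varrho$-coordinated $1$-form treated in the follow-up Lemma). Consequently the proof is a direct application of the isometric isomorphism $\iota$ of \eqref{euk.3}.

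First I would suppose that $a_1,a_2\in [{\mathsf B}({\mathcal H})_{\mathsf h}\sqrt{\varrho}\,]_{\mathsf{H.S.}}$ both solve \eqref{ope1b}, set $b=a_1-a_2\in [N_{\mathsf h}\varphi]$, and observe that $b\sqrt{\varrho}+\sqrt{\varrho}\,b^{*}={\mathsf 0}$. Then for every $y\in {\mathsf B}({\mathcal H})$ the cyclicity of the trace together with $\varphi^{*}=\sqrt{\varrho}$ yields
\[
\bigl\langle\pi(y)b,\varphi\bigr\rangle_{\mathsf{H.S.}}+\bigl\langle\pi(y)\varphi,b\bigr\rangle_{\mathsf{H.S.}}={{\mathsf{tr}}}\,y\bigl(b\sqrt{\varrho}+\sqrt{\varrho}\,b^{*}\bigr)=0,
\]
so the hermitian linear form $g=\re f_{2b,\varphi}$ on $N$ vanishes identically on the ultrastrongly dense $^{*}$-subalgebra $\pi({\mathsf B}({\mathcal H}))$. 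Since $g$ is ultrastrongly continuous (indeed, $b$ and $\varphi$ are Hilbert--Schmidt, so $f_{2b,\varphi}\in N_{*}$), $g={\mathsf 0}$ on all of $N$.

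The second and final step is to invoke Lemma~\ref{euk1}: with respect to the vector state $\nu_{\pi}$ on $N$ implemented by $\varphi$, the map $\iota:[N_{\mathsf h}\varphi]\ni b\mapsto \re f_{b,\varphi}\in {\mathfrak H}_{\nu_{\pi}}(N)$ is an isometric isomorphism between Euclidean spaces. From $\iota(b)=\frac{1}{2}g={\mathsf 0}$ one concludes $b={\mathsf 0}$, i.e.\ $a_{1}=a_{2}$, which is exactly the uniqueness claim. Equivalently, one can cite Corollary~\ref{vNT}\,\eqref{vNT2}, according to which the hermitian tangent form $\hat f$ on $N$ given by $\hat f(\cdot)=\langle(\cdot)\hat\psi_{0},\varphi\rangle+\langle(\cdot)\varphi,\hat\psi_{0}\rangle$ determines $\hat\psi_{0}\in [N_{\mathsf h}\varphi]$ uniquely; putting $\hat\psi_{0}=b$ and $\hat f=g=0$ again forces $b={\mathsf 0}$.

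I expect no genuine obstacle: the only point deserving care is the passage from the pointwise equation $b\sqrt{\varrho}+\sqrt{\varrho}\,b^{*}={\mathsf 0}$ in ${\mathsf{H.S.}}({\mathcal H})$ to the vanishing of the associated linear form on $N$, which requires one to remember that in the representation $\pi$ the commutant is implemented by right multiplication, so that $\langle\pi(y)b,\sqrt{\varrho}\rangle_{\mathsf{H.S.}}={{\mathsf{tr}}}\,y\,b\sqrt{\varrho}$ and $\langle\pi(y)\sqrt{\varrho},b\rangle_{\mathsf{H.S.}}={{\mathsf{tr}}}\,y\sqrt{\varrho}\,b^{*}$ with the correct placement of adjoints. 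Once this identification is made, the Euclidean isometry of Lemma~\ref{euk1} closes the argument, with no use of the commutation hypotheses \eqref{ope1a} on $\omega$ and the $\Delta P_{k}$.
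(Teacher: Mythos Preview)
Your proof is correct and takes a genuinely different route from the paper's. The paper argues directly at the level of operators: given $c\in[{\mathsf B}({\mathcal H})_{\mathsf h}\sqrt{\varrho}\,]_{\mathsf{H.S.}}$ with $c\sqrt{\varrho}+\sqrt{\varrho}\,c^*={\mathsf 0}$, it compresses by the spectral projections $E_n$ of $\varrho$, uses finiteness of $E_n$ to write $E_ncE_n=b_n\sqrt{\varrho}$ with $b_n\in E_n{\mathsf B}({\mathcal H})_{\mathsf h}E_n$, obtains $b_n\varrho+\varrho\,b_n={\mathsf 0}$, and reads off $(\lambda_j+\lambda_k)(b_n)_{jk}=0$ in the eigenbasis to force $b_n={\mathsf 0}$, hence $c={\mathsf 0}$ in the $\mathsf{H.S.}$-limit. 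Your argument instead translates the homogeneous equation into the vanishing of the hermitian form $\re f_{b,\varphi}$ on the $vN$-algebra $N=\pi({\mathsf B}({\mathcal H}))$ acting on ${\mathsf{H.S.}}({\mathcal H})$, and then invokes the injectivity of the Euclidean isometry $\iota$ from Lemma~\ref{euk1}. This is more conceptual and ties the appendix back to the tangent-space machinery of Section~\ref{TF}; the paper's version is self-contained within the appendix and entirely elementary. A small simplification in your write-up: since $\pi$ is already a $^*$-isomorphism onto the full $vN$-algebra $N$ (cf.\ subsection~\ref{beimatrix0}), the density/normality step is unnecessary---vanishing on $\pi({\mathsf B}({\mathcal H}))$ is vanishing on $N$. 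Both proofs, as you note, make no use of the commutation hypotheses~\eqref{ope1a}.
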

\begin{proof}
Provided solutions exist, by uniqueness $c={\mathsf 0}$ has to be the only solution of
\begin{equation}\label{kerope1}
c\sqrt{\varrho}+\sqrt{\varrho}\, c^*={\mathsf 0},\ c\in \bigl[{\mathsf B}({\mathcal H})_{\mathsf h}\sqrt{\varrho}\,\bigr]_{{\mathsf{H.S.}}}
\end{equation}
This we are going to verify now.
Let $\{\varphi_k\}$ be an eigenbasis of $\varrho$. By faithfulness of $\varrho$ this is a complete orthonormal system of eigenvectors in ${\mathcal H}$. Let $\{\lambda_k\}$ be the corresponding sequence of  eigenvalues (each value repeated in the list as often as according to its multiplicity). Let $E_n$ be the orthoprojection on the linear span of  $\{\varphi_1,\varphi_2,\ldots,\varphi_n\}$.
For $c$ obeying \eqref{kerope1} and $n\in {{\mathbb{N}}}$, consider
$
c_n=E_n c E_n
$
Then, since $E_n$ is of finite rank, we have $$c_n=b_n\sqrt{\varrho}$$ for some $b_n\in E_n{\mathsf B}({\mathcal H})_{\mathsf h} E_n $, and for which according to \eqref{kerope1} the equation
\[
b_n\varrho+\varrho\, b_n={\mathsf 0}
\]
 holds. From this, we get $(\lambda_j+\lambda_k)(b_n)_{jk}=0$, for the $jk$-matrix element $(b_n)_{jk}$ of $b_n$ taken in  respect of the eigenbasis of $\varrho$. Thus, since the eigenvalues are strictly positive $(b_n)_{jk}=0$ follows, for all $j,k\in {{\mathbb{N}}}$. Hence, $b_n={\mathsf 0}$ has to be fulfilled. This conclusion applies for any $n$. Thus, in view of the above, $c_n={\mathsf 0}$ for all $n\in {{\mathbb{N}}}$. From this $c={\mathsf{H.S.}}-\lim_{n\to\infty} c_n={\mathsf 0}$ is seen.
\end{proof}
Prior to going in for the existence of a solution, consider first the existence and uniqueness problem for the second type of operator equation  \eqref{twoope}.
\begin{lemma}\label{second}
There exists a uniquely determined  solution of \eqref{twoope}.
\end{lemma}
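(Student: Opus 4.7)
\medskip
\noindent\textbf{Proof proposal for Lemma \ref{second}.} My plan is to exploit the auxiliary commutation hypothesis \eqref{ope1a} in order to reduce the problem to a direct sum of finite-dimensional Sylvester problems, each of which is both classically solvable and classically unique. Since $\varrho$ and $\omega$ both commute with every $\Delta P_k$, and every $\Delta P_k$ has finite rank, the restrictions $\varrho_k=\Delta P_k\varrho|\Delta P_k{\mathcal H}$ and $\omega_k=\Delta P_k\omega|\Delta P_k{\mathcal H}$ act on a finite-dimensional Hilbert space; moreover, faithfulness of $\varrho$ forces $\varrho_k$ to be strictly positive and bounded below by some $\lambda_k>0$, while $\omega_k$ is hermitian. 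The Heinz--Rosenblum--Lumer integral formula \cite{Hein:51,Rose:56,LuRo:59} (see also \eqref{loes2}) then yields a unique bounded hermitian $x_k\in{\mathsf B}(\Delta P_k{\mathcal H})$ satisfying $\omega_k=x_k\varrho_k+\varrho_k x_k$, explicitly given by $x_k=\int_0^\infty \exp(-t\varrho_k)\,\omega_k\,\exp(-t\varrho_k)\,dt$.

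For existence, I will then paste the blocks together: define
\[
{\mathcal D}(x)=\Bigl\{\xi\in{\mathcal H}:\ \sum_k\|x_k\Delta P_k\xi\|^2<\infty\Bigr\},\qquad x\xi=\sum_k x_k\Delta P_k\xi.
\]
Standard direct-sum arguments show $x$ is densely defined (it contains all finite sums $\sum_{k\leq n}\Delta P_k{\mathcal H}$) and self-adjoint, being the orthogonal direct sum of the bounded self-adjoint operators $x_k$. The range condition $\varrho\,{\mathcal D}(x)\subset{\mathcal D}(x)$ follows from $\Delta P_k(\varrho\xi)=\varrho_k\Delta P_k\xi$ together with the uniform bound $\|\varrho_k\|\leq\|\varrho\|$, so $\|x_k\Delta P_k\varrho\xi\|\leq\|\varrho\|\|x_k\Delta P_k\xi\|$. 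Finally, the operator equation $\omega=x\varrho+\varrho x$ is verified blockwise on ${\mathcal D}(x)$ by construction.

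For uniqueness, suppose $y$ is any densely defined self-adjoint solution of \eqref{twoope1} with $\varrho{\mathcal D}(y)\subset{\mathcal D}(y)$. Pinching by $\Delta P_j$ on the left and $\Delta P_k$ on the right and using $\omega\smile\Delta P_k$, $\varrho\smile\Delta P_k$, one obtains for $\xi\in{\mathcal D}(y)$
\[
\Delta P_j\omega\Delta P_k\xi=\varrho_j(\Delta P_j y\Delta P_k\xi)+(\Delta P_j y\Delta P_k\xi)\varrho_k,
\]
where the left-hand side vanishes for $j\neq k$. Since $\varrho_j$ and $\varrho_k$ are strictly positive on their respective finite-dimensional subspaces, their spectra do not meet $(-\infty,0]$, so the Sylvester equation $\varrho_j Z+Z\varrho_k=0$ has only the trivial solution; hence $\Delta P_j y\Delta P_k=0$ for $j\neq k$ on the dense set ${\mathcal D}(y)$, and $y$ is block-diagonal with respect to $\{\Delta P_k\}$. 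Restricting to the finite-dimensional block $\Delta P_k{\mathcal H}\subset{\mathcal D}(y)$, uniqueness of the solution of the corresponding finite-dimensional equation forces $\Delta P_k y\Delta P_k=x_k$. Two self-adjoint extensions sharing all their finite-dimensional blocks must coincide (either by directly comparing spectral data within each block or by noting they agree on a core), whence $y=x$.

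The main obstacle I anticipate is the domain bookkeeping: showing that the direct-sum operator $x$ just constructed is honestly self-adjoint (rather than merely essentially self-adjoint) and that the condition $\varrho{\mathcal D}(x)\subset{\mathcal D}(x)$ is robust enough to make every pinching manipulation in the uniqueness argument rigorous on ${\mathcal D}(y)$ rather than only on a formal core. Both issues are technical but standard: self-adjointness of $\bigoplus_k x_k$ on its maximal domain is elementary, and the block-diagonality argument only needs the fact that $\Delta P_k{\mathcal H}\subset{\mathcal D}(y)$, which follows from the observation that $\Delta P_k$ is finite-rank and so $\Delta P_k\eta\in{\mathcal D}(y)$ for any $\eta$ belonging to the dense domain ${\mathcal D}(y)$ after a finite-dimensional spectral cut-off.
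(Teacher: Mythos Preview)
Your approach is essentially the paper's: solve blockwise via the Heinz--Rosenblum integral, assemble the $x_k$ into a self-adjoint direct sum on the maximal $\ell^2$-domain, then reduce an arbitrary self-adjoint solution $y$ to its behaviour on the block core and invoke essential self-adjointness. The paper phrases existence by first defining $x_0$ on the algebraic span ${\mathcal D}(x_0)={\mathsf{Lin}}\bigl(\bigcup_k\Delta P_k{\mathcal H}\bigr)$, checking that $(x_0\pm{\mathsf i})\,{\mathcal D}(x_0)$ is dense, and setting $x=x_0^{**}$; your direct-sum self-adjointness claim is the same statement. For uniqueness the paper evaluates the equation on the eigenvectors $\varphi_k$ of $\varrho$ to get $(\lambda_k{\mathsf 1}+\varrho)(x-y)\varphi_k={\mathsf 0}$, hence $y|_{{\mathcal D}(x_0)}=x_0$, and concludes $y=x$ from essential self-adjointness of $x_0$; your pinching argument followed by ``agree on a core'' is a minor rearrangement of the same idea.

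The one point that is genuinely unconvincing is your justification that $\Delta P_k{\mathcal H}\subset{\mathcal D}(y)$: finite rank of $\Delta P_k$ does \emph{not} by itself force $\Delta P_k$ to preserve the domain of an arbitrary self-adjoint $y$, and the phrase ``after a finite-dimensional spectral cut-off'' is a non-sequitur (a spectral cut-off of \emph{what}?). The paper, for its part, simply asserts ${\mathcal D}(x_0)\subset{\mathcal D}(y)$ and proceeds; so you are not worse off than the source, but you have flagged a real step rather than resolved it. If you want to make this honest, the cleanest route is the paper's: show directly that for each eigenvector $\varphi_k$ of $\varrho$ the functional $\eta\mapsto\langle\varphi_k,y\eta\rangle$ is bounded on the unit ball of ${\mathcal D}(y)$ (using $\varrho\varphi_k=\lambda_k\varphi_k$ together with the equation to trade $y$ for $\omega$ and $\varrho y$), whence $\varphi_k\in{\mathcal D}(y^*)={\mathcal D}(y)$; once you know $y\supset x_0$, essential self-adjointness of $x_0$ finishes.
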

\begin{proof}
To find a solution, for each $k\in {{\mathbb{N}}}$ consider the operator equation
\begin{equation}\label{partope}
\omega\,\Delta P_k=x_k \varrho+\varrho\,x_k
\end{equation}
and asking for a self-adjoint solution $x_k\in \Delta P_k {\mathsf B}({\mathcal H}) \Delta P_k$. We remark that since $\varrho$ is positive of full support,  ${\mathsf{dim}} \Delta P_k {\mathcal H}<\infty$ is fulfilled and \eqref{ope1a} holds, the previous equation is uniquely solvable in $\Delta P_k {\mathsf B}({\mathcal H}) \Delta P_k$, and this solution reads
\begin{equation}\label{ope1d}
x_k=\int_0^\infty \exp{(-\varrho\,t)}\,\omega \exp{(-\varrho\,t)}\,\Delta P_k\ d\/t
\end{equation}
We  are going to construct a solution $x$ of  the required specification \eqref{twoope} out of all these $x_k$. To start with, let a linear operator $x_0$ over ${\mathcal H}$ be given by setting that
\[
x_0\psi=\sum_k x_k\psi
\]
for each $\psi$ with $\Delta P_k \psi\not={\mathsf 0}$ for at most finitely many subscripts $k\in {{\mathbb{N}}}$. The set of all vectors of that kind is a   dense linear domain of definition for $x_0$, i.e.~we set
\begin{equation}\label{saope20}
{\mathcal D}(x_0)={\mathsf{Lin}}\Bigl( \bigcup_k \Delta P_k {\mathcal H}\Bigr)
\end{equation}By definition of $x_0$ one easily infers that the adjoint operator $x_0^*$ reads
\[
x_0^*\psi=\sum_k x_k\psi
\]
with domain of definition
$
{\mathcal{D}}(x_0^*)=\{\psi\in {\mathcal H}: \sum_k \| x_k \psi\|^2<\infty\}
$. Hence, $x_0\subset x_0^*$, i.e. $x_0$ is symmetric.  Note that by definition of $x_0$, and since $x_k$ for each $k$ is a bounded symmetric linear operator with support in $\Delta P_k$, the relations $\Delta P_k{\mathcal H}=(x_k\pm {\mathsf{i}}{\mathsf 1}) \Delta P_k{\mathcal H}$ have to hold. In view of the definition of $x_0$ and ${\mathcal D}(x_0)$, from this ${\mathcal D}(x_0)=(x_0\pm {\mathsf{i}}{\mathsf 1}){\mathcal D}(x_0)$ follows. That is, both $(x_0+ {\mathsf{i}}{\mathsf 1}){\mathcal D}(x_0)$ and $(x_0- {\mathsf{i}}{\mathsf 1}){\mathcal D}(x_0)$ are dense subsets of ${\mathcal H}$. Hence, $x_0$ has to be essentially self-adjoint.
Thus $x_0^*=x_0^{**}$ holds, and by the setting
\begin{subequations}\label{saope2}
\begin{equation}\label{saope2a}
x\psi=\sum_k x_k\psi
\end{equation}
for each $\psi$ belonging to the  domain of definition
\begin{equation}\label{saope2b}
{\mathcal{D}}(x)=\Bigl\{\psi\in {\mathcal H}: \sum_k \| x_k \psi\|^2<\infty\Bigr\}
\end{equation}
\end{subequations}
owing to $x=x_0^*$ the unique self-adjoint extension $x$  of $x_0$ is given. Now, with the  eigenbasis  $\{\varphi_k\}$ of $\varrho$, since \eqref{ope1a} is fulfilled with ${{\mathsf{dim}}}\,\Delta P_k{\mathcal H}<\infty$ for all $k$, we equally well have that ${\mathcal D}(x_0)=\mathsf{Lin}(\{\varphi_k\})$. From this especially
\begin{equation}\label{invope}
{\mathcal D}(x_0)=\varrho \,{\mathcal D}(x_0)
\end{equation}
can be followed. Thus, for any $\psi\in {\mathcal D}(x_0)$, according to \eqref{partope} we infer that
$$
x_0\varrho \psi+\varrho\,x_0\psi= \sum_k (x_k\varrho+\varrho\,x_k)\,\psi
=\sum_k \omega\,\Delta P_k\psi=\omega\,\psi
$$
From this and  $x=x_0^*$, for $\psi$ varying in ${\mathcal D}(x_0)$ and each $\varphi\in {\mathcal D}(x)={\mathcal D}(x_0^*)$, we will show that $ {\mathcal D}(x_0)\ni\psi\mapsto \langle x_0\psi,\varrho\varphi\rangle$ is bounded. In fact, we may conclude as follows:
\begin{eqnarray*}
\langle x_0\psi,\varrho\varphi\rangle&=& \langle \varrho\,x_0\psi,\varphi\rangle \\
&=& \langle\omega\,\psi,\varphi\rangle-\langle x_0\varrho\psi,\varphi\rangle\\
& = & \langle\omega\,\psi,\varphi\rangle-\langle \psi,\varrho x\varphi\rangle\\
&=& \langle\psi,(\omega-\varrho\,x)\varphi\rangle
\end{eqnarray*}
Hence, for each $\varphi\in {\mathcal D}(x)$ we have $\varrho\varphi\in {\mathcal D}(x_0^*)={\mathcal D}(x)$, that is,
\begin{equation}\label{incope}
\varrho\,{\mathcal D}(x)\subset {\mathcal D}(x)
\end{equation}
Accordingly,  since then $\langle x_0\psi,\varrho\varphi\rangle=\langle\psi,x\varrho\varphi\rangle$ holds, the conclusion is
$$
\langle\psi,x\varrho\varphi\rangle=\langle\psi,(\omega-\varrho\,x)\varphi\rangle
$$
for all $\psi\in {\mathcal D}(x_0)$ and each $\varphi\in {\mathcal D}(x)$. By denseness of ${\mathcal D}(x_0)$ in ${\mathcal H}$ from this
$\omega\varphi=x\varrho\varphi+\varrho x\varphi$ follows, for each $\varphi\in {\mathcal D}(x)$. Thus, in view of this and since \eqref{incope} holds, $x$ is a self-adjoint solution of the operator equation $\omega=x\varrho+\varrho x$.

We are going to show  uniqueness of this solution. Suppose $y=y^*$ satisfies $\omega=y\varrho+\varrho y$. Then, owing to ${\mathcal D}(x_0)\subset {\mathcal D}(x)\cap {\mathcal D}(y)$, with $z=x-y$ we have
\[
(\lambda_k {\mathsf 1}+\varrho) z \varphi_k=z\varrho \varphi_k+\varrho z \varphi_k={\mathsf 0}
\]
Note that owing to $\varrho\geq {\mathsf 0}$ and since each eigenvalue $\lambda_k$ is strictly positive, $(\lambda_k {\mathsf 1}+\varrho)$ has bounded inverse. Accordingly, $z\varphi_k={\mathsf 0}$ follows, for each $k$. Hence,
$$
y|{\mathcal D}(x_0)=x|{\mathcal D}(x_0)=x_0
$$
has to be fulfilled. That is, $y$ is a self-adjoint extension of $x_0$. By essential self-adjointness of $x_0$ then $y=x$ follows. This proves uniqueness.
\end{proof}
Now, we are ready to state the intimate relationship which exists between  the solvableness of \eqref{ope1} and the structure of the solution of \eqref{twoope}.
\begin{corolla}\label{saope}
Suppose \eqref{ope1} admits the solution $a$.  If $x$ is the solution of  \eqref{twoope}, then the following are satisfied:
\begin{subequations}\label{saope0}
\begin{eqnarray}\label{saope0c}
\sqrt{\varrho}\,{\mathcal H} &\subset & {\mathcal D}(x)\\
\label{saope0a}
a &=& x \sqrt{\varrho}
\end{eqnarray}
\end{subequations}
\end{corolla}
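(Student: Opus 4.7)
The plan is to decompose \eqref{ope1b} block-wise along the mutually commuting projections $\Delta P_k$. Since $[\sqrt{\varrho},\Delta P_k]=[\omega,\Delta P_k]=0$ by \eqref{ope1a}, multiplying \eqref{ope1b} on the right by $\Delta P_k$ and sandwiching on the left by $\Delta P_j$ gives, for $a_{jk}:=\Delta P_j\,a\,\Delta P_k$, the block identity
\begin{equation*}
\delta_{jk}\,\omega\Delta P_k \;=\; a_{jk}\sqrt{\varrho}\;+\;\sqrt{\varrho}\,a_{kj}^{\ast},
\end{equation*}
where everything lives on the finite-dimensional subspaces $\Delta P_l{\mathcal H}$ and $\sqrt{\varrho}|_{\Delta P_l{\mathcal H}}$ is strictly positive invertible. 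Writing $a$ as a Hilbert--Schmidt limit of operators $h_n\sqrt{\varrho}$ with hermitian $h_n$ (this is exactly \eqref{ope1aa}), the invertibility of $\sqrt{\varrho}$ on each block together with the equivalence of norms in finite dimensions produces operator-norm limits $b_{jk}:=\lim_n\Delta P_j h_n\Delta P_k$ satisfying $a_{jk}=b_{jk}\sqrt{\varrho}$ and $b_{jk}^{\ast}=b_{kj}$.

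For $j\neq k$, substituting these relations into the sandwiched equation turns it into the Sylvester equation $b_{jk}R_k+R_jb_{jk}=0$ with $R_l=\varrho|_{\Delta P_l{\mathcal H}}$ strictly positive; since $R_j$ and $-R_k$ have disjoint spectra, this forces $b_{jk}=0$ and hence $a_{jk}=0$. For the diagonal block the hermitian limit $b_k:=b_{kk}$ satisfies $\omega\Delta P_k=b_k\varrho+\varrho b_k$, and by the uniqueness of the finite-dimensional block solution already exhibited in the proof of Lemma \ref{second} via formula \eqref{ope1d} we conclude $b_k=x_k$; hence $a\Delta P_k=\Delta P_k\,a\,\Delta P_k=x_k\sqrt{\varrho}=x\sqrt{\varrho}\,\Delta P_k$.

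The final step passes from this block identity to the global statement using that $x=x_0^{\ast}$ is closed. For arbitrary $\phi\in{\mathcal H}$ set $\phi_N=\sum_{k\le N}\Delta P_k\phi$; then $\sqrt{\varrho}\,\phi_N\in{\mathcal D}(x_0)$ and
\begin{equation*}
a\phi_N \;=\; \sum_{k\le N}x_k\sqrt{\varrho}\,\Delta P_k\phi \;=\; x_0\!\left(\sqrt{\varrho}\,\phi_N\right).
\end{equation*}
As $N\to\infty$ one has $\sqrt{\varrho}\,\phi_N\to\sqrt{\varrho}\,\phi$ and, by boundedness of $a\in{\mathsf{H.S.}}({\mathcal H})$, also $a\phi_N\to a\phi$; closedness of $x$ then yields simultaneously $\sqrt{\varrho}\,\phi\in{\mathcal D}(x)$ and $x\sqrt{\varrho}\,\phi=a\phi$, establishing both \eqref{saope0c} and \eqref{saope0a}. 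The main technical point is the off-diagonal vanishing: the sandwiched equation alone constrains $a_{jk}$ only through a single bilinear relation, and one crucially needs the hermitian-approximation hypothesis \eqref{ope1aa} to obtain the extra symmetry $b_{jk}^{\ast}=b_{kj}$ which reduces the remaining content to a standard Sylvester equation.
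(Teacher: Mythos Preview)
Your argument is correct. The diagonal analysis coincides with the paper's, but the two proofs diverge in how they establish that $a$ has no off-diagonal blocks and in how they pass to the global identity.

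The paper never touches the off-diagonal blocks $\Delta P_j a\Delta P_k$ for $j\neq k$ directly. Instead it introduces the conditional expectation $\Phi(\cdot)=\sum_k\Delta P_k(\cdot)\Delta P_k$, verifies from the diagonal identity $\Delta P_k a\Delta P_k=x_k\sqrt{\varrho}$ and the explicit domain description \eqref{saope2b} that $\Phi(a)=x\sqrt{\varrho}$ (this already yields \eqref{saope0c}), then checks that $\Phi(a)$ is itself a solution of \eqref{ope1} and invokes the uniqueness Lemma~\ref{unope1} to conclude $a=\Phi(a)$. Your route instead proves $a=\Phi(a)$ \emph{directly}: the hermitian-approximation hypothesis \eqref{ope1aa} manufactures the symmetry $b_{jk}^{\ast}=b_{kj}$, which collapses the off-diagonal block equation to the homogeneous Sylvester equation $R_jb_{jk}+b_{jk}R_k=0$ with strictly positive $R_j,R_k$, forcing $b_{jk}=0$. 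You then obtain \eqref{saope0c}--\eqref{saope0a} in one stroke from closedness of $x$ rather than from the formula \eqref{saope2b} for ${\mathcal D}(x)$. Your approach is a bit more self-contained (it does not call on Lemma~\ref{unope1}), while the paper's conditional-expectation detour hides the Sylvester computation inside that earlier uniqueness lemma; the underlying algebra---positivity of eigenvalue sums $\lambda_j+\lambda_k$---is the same in both.
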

\begin{proof}
Suppose a solution $a$ of \eqref{ope1} to exist. By assumption \eqref{ope1a} from \eqref{ope1b}
\begin{equation}\label{ope1c}
\omega\,\Delta P_k=\Delta P_k a \Delta P_k \sqrt{\varrho}+\sqrt{\varrho}\,(\Delta P_k a \Delta P_k)^*
\end{equation}
follows, for each $k$. Due to $a\in \bigl[{\mathsf B}({\mathcal H})_{\mathsf h}\sqrt{\varrho}\,\bigr]_{{\mathsf{H.S.}}}$, $\{z_n\}\subset {\mathsf B}({\mathcal H})_{\mathsf h}$ exists with  $$a={\mathsf{H.S.}}-\lim_{n\to\infty} z_n\sqrt{\varrho}$$
Let $\sqrt{\varrho}_k^{\,-1}$ be the unique positive linear operator with support $\Delta P_k$ and obeying
$$\Delta P_k=\sqrt{\varrho} \sqrt{\varrho}_k^{\,-1}=\sqrt{\varrho}_k^{\,-1}  \sqrt{\varrho}
$$
Then, from the above obviously the following limiting relation is obtained to hold $$\Delta P_k a \Delta P_k \sqrt{\varrho}_k^{\,-1}={\mathsf{H.S.}}-\lim_{n\to\infty} \Delta P_k z_n \Delta P_k
$$
Hence, if $y_k\in \Delta P_k {\mathsf B}({\mathcal H})_{\mathsf h} \Delta P_k$ is defined by  $y_k=\lim_{n\to\infty} \Delta P_k z_n \Delta P_k$, we infer that
\begin{equation}\label{saope1}
\Delta P_k a \Delta P_k= y_k \sqrt{\varrho}
\end{equation}
is fulfilled, for each $k$. In view of \eqref{ope1c} and \eqref{saope1}, one infers that by each of the bounded self-adjoint linear operators $y_k\in \Delta P_k {\mathsf B}({\mathcal H}) \Delta P_k$ obviously the equation
\[
\omega\,\Delta P_k=y_k \varrho+\varrho y_k
\]
is satisfied. In comparing the latter equation to the one considered on occassion of the proof of Lemma \ref{second}, we conclude that $y_k$ has to equal the unique solution of \eqref{partope}. That is, $y_k=x_k$ has to be fulfilled, with the partial operators $x_k$ defined by formula  \eqref{ope1d} and used in the construction of the unique self-adjoint solution $x$ of \eqref{twoope}.
Now, consider the normal conditional expectation mapping $\Phi$ defined as   $$\Phi(\cdot)=\sum_k \Delta P_k (\cdot) \Delta P_k
$$
and projecting from ${\mathsf B}({\mathcal H})$ onto the unital subalgebra ${\mathfrak{A}}_\Phi=\sum_k^\oplus {\mathsf B}(\Delta P_k{\mathcal H})$ of fixed-points of $\Phi$.
From \eqref{saope1} and due to $y_k=x_k$, we conclude that for each $\psi \in {\mathcal H}$,
\[
\Phi(a)\psi=\lim_{n\to\infty}\sum_{k\leq n} \Delta P_k a \Delta P_k\psi =\lim_{n\to\infty} \sum_{k\leq n} x_k \sqrt{\varrho}\,\psi
\]
holds. Hence, two facts can be stated: firstly, comparison with \eqref{saope2} shows that    $$\sqrt{\varrho}\,\psi\in {{\mathcal{D}}}(x)$$ must hold, for all  $\psi \in {\mathcal H}$, and thus  \eqref{saope0c} is seen. Secondly, we see that
\begin{equation}\label{saope3}
\Phi(a)=x \sqrt{\varrho}
\end{equation}
has to be satisfied.  From this  $\Phi(a) \in \bigl[{\mathsf B}({\mathcal H})_{\mathsf h}\sqrt{\varrho}\,\bigr]_{{\mathsf{H.S.}}}$ can be followed. In fact, for $$P_n=\sum_{k\leq n} \Delta P_k$$
we have ${\mathsf{l.u.b.}}\,\{P_n\} ={\mathsf 1}$. Since a conditional expectation $\Phi$ is a unital, completely positive linear mapping  \cite{storm:74}, according to \cite{choi:72} the Cauchy-Schwarz inequality  $$\Phi(z)\Phi(z)^*\leq \Phi(z z^*)$$ holds, for any operator $z$. Especially, we infer that
\[
{{\mathsf{tr}}}\,\Phi(a)\Phi(a^*)P_n^\perp\leq {{\mathsf{tr}}}\,\Phi(a a^*)P_n^\perp={{\mathsf{tr}}}\,a a^* P_n^\perp
\]
is fulfilled, and since the operator $a a^*$ is a positive trace-class operator, we have
\[
\lim_{n\to\infty} {{\mathsf{tr}}}\,a a^* P_n^\perp=0
\]
Therefore, from ${{\mathsf{tr}}}\,\Phi(a)\Phi(a^*)P_n^\perp=\| \Phi(a)- P_n\Phi(a)\|_{{\mathsf{H.S.}}}^2$ in view of the previous relations and  since $P_n$ is of  finite rank and $\Phi(a)$ is bounded, we have $P_n \Phi(a)\in {\mathsf{H.S.}}({\mathcal H})$ and thus
\[
\Phi(a)={\mathsf{H.S.}}-\lim_{n\to\infty} P_n \Phi(a)
\]
In addition, by \eqref{saope3} and by construction of $x$, according to \eqref{saope2a} we have
\[
P_n  \Phi(a) = \Bigl(\sum_{k\leq n} x_k\Bigr)\sqrt{\varrho} \in {\mathsf B}({\mathcal H})_{\mathsf{h}}\sqrt{\varrho}
\]
In summarizing, from the previous the following fact now can be inferred to hold: $$\Phi(a) \in \bigl[{\mathsf B}({\mathcal H})_{\mathsf h}\sqrt{\varrho}\,\bigr]_{{\mathsf{H.S.}}}$$ On the other hand, by the conditional expectation properties of $\Phi$, and since due to \eqref{ope1a} both  $\sqrt{\varrho}$ and $\omega$ belong to $\mathfrak{A}_\Phi$, application of  $\Phi$ to  \eqref{ope1b} shows that
\[
\omega=\Phi(a)\sqrt{\varrho}+\sqrt{\varrho}\,\Phi(a)^*
\]
is satisfied.
Thus, provided a solution $a$ of \eqref{ope1} exists, then  $\Phi(a)$ in view of the just proven will yield a solution of \eqref{ope1}, too. By the uniqueness assertion of Lemma \ref{unope1} we have $a=\Phi(a)$. Hence, the solution $a$ of \eqref{ope1} is as asserted by \eqref{saope0a}.
\end{proof}
There exists a weak converse of the assertion in the following form.
\begin{lemma}\label{invsaope}
Under the condition that \eqref{ope1a} holds, suppose the condition $$\sqrt{\varrho}\,{\mathcal H}\subset {\mathcal D}(x)$$ to be  fulfilled, with the unique self-adjoint solution $x$  of \eqref{twoope}. Then, $a=x\,\sqrt{\varrho}$ is a bounded solution of the operator equation \eqref{ope1b}.
\end{lemma}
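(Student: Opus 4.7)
My plan is to verify two things in turn: that the operator $a=x\sqrt{\varrho}$ (defined on all of $\mathcal{H}$ thanks to the hypothesis $\sqrt{\varrho}\,\mathcal{H}\subset\mathcal{D}(x)$) is actually bounded, and that the identity $\omega=a\sqrt{\varrho}+\sqrt{\varrho}\,a^{*}$ then holds on all of $\mathcal{H}$.

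For the first point I would appeal to the closed graph theorem. Since $x$ is self-adjoint it is closed, and $\sqrt{\varrho}$ is bounded, so if $\varphi_{n}\to\varphi$ and $x\sqrt{\varrho}\,\varphi_{n}\to\psi$ then $\sqrt{\varrho}\,\varphi_{n}\to\sqrt{\varrho}\,\varphi\in\mathcal{D}(x)$ by hypothesis, and closedness of $x$ forces $x\sqrt{\varrho}\,\varphi=\psi$. Hence $a$ is a closed linear operator with full domain $\mathcal{H}$, which makes it bounded. Next, I would identify the structure of $a^{*}$: for arbitrary $\varphi\in\mathcal{H}$ and $\psi\in\mathcal{D}(x)$, self-adjointness of $x$ and boundedness and self-adjointness of $\sqrt{\varrho}$ yield
\[
\langle a\varphi,\psi\rangle=\langle x\sqrt{\varrho}\,\varphi,\psi\rangle=\langle\sqrt{\varrho}\,\varphi,x\psi\rangle=\langle\varphi,\sqrt{\varrho}\,x\psi\rangle,
\]
so that $a^{*}\psi=\sqrt{\varrho}\,x\psi$ for every $\psi\in\mathcal{D}(x)$. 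Thus $a^{*}$ (which is bounded and defined on all of $\mathcal{H}$) extends the densely defined operator $\sqrt{\varrho}\,x$.

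Now I would assemble the two pieces. Since $\varrho\mathcal{H}=\sqrt{\varrho}\,(\sqrt{\varrho}\,\mathcal{H})\subset\sqrt{\varrho}\,\mathcal{H}\subset\mathcal{D}(x)$, the product $x\varrho$ is defined on all of $\mathcal{H}$, and the identity $a\sqrt{\varrho}\,\varphi=x\sqrt{\varrho}\,(\sqrt{\varrho}\,\varphi)=x\varrho\,\varphi$ holds for every $\varphi\in\mathcal{H}$. On the other hand, for $\varphi\in\mathcal{D}(x)$, what was shown above gives $\sqrt{\varrho}\,a^{*}\varphi=\sqrt{\varrho}\,\sqrt{\varrho}\,x\varphi=\varrho x\varphi$. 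Combining these on the dense set $\mathcal{D}(x)$ and invoking Lemma \ref{second}, which says precisely that $\omega\varphi=x\varrho\varphi+\varrho x\varphi$ there, produces
\[
(a\sqrt{\varrho}+\sqrt{\varrho}\,a^{*})\varphi=x\varrho\varphi+\varrho x\varphi=\omega\varphi,\qquad \forall\varphi\in\mathcal{D}(x).
\]
Both sides of $\omega=a\sqrt{\varrho}+\sqrt{\varrho}\,a^{*}$ are bounded operators on $\mathcal{H}$, so the equality on the dense subspace $\mathcal{D}(x)$ extends by continuity to all of $\mathcal{H}$, yielding the asserted operator identity.

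There is no serious obstacle here; the only subtle point is the bookkeeping around the adjoint $a^{*}$ and the domain issues, in particular the observation that $\varrho\mathcal{H}\subset\mathcal{D}(x)$ is automatic from the standing hypothesis, so that $a\sqrt{\varrho}=x\varrho$ makes sense everywhere as a bounded operator even when $x$ itself is unbounded. I would note, though, that the result gives a bounded solution of \eqref{ope1b} but not necessarily the Hilbert–Schmidt solution singled out by \eqref{ope1aa} — that refinement requires the extra information of Theorem \ref{hsope}/the subsequent analysis rather than just the bare hypothesis $\sqrt{\varrho}\,\mathcal{H}\subset\mathcal{D}(x)$ imposed here.
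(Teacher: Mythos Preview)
Your proof is correct and follows essentially the same approach as the paper's: both establish boundedness of $a=x\sqrt{\varrho}$ via the closed graph theorem (the paper phrases this as $a=(\sqrt{\varrho}\,x)^*$ being a closed operator with full domain, whereas you argue directly from closedness of $x$), identify $a^*$ on $\mathcal{D}(x)$ with $\sqrt{\varrho}\,x$, verify \eqref{ope1b} on $\mathcal{D}(x)$ using \eqref{twoope1}, and extend by continuity. Your closing remark that the Hilbert--Schmidt condition \eqref{ope1aa} is not asserted here is also accurate.
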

\begin{proof}
Note that  since $\sqrt{\varrho}$ is bounded, by standard facts it follows that $$b=\sqrt{\varrho}\, x$$ is densely defined with ${\mathcal D}(b)={\mathcal D}(x)$, and with adjoint reading as $b^*=x\sqrt{\varrho}$. Thus, as an adjoint and since \eqref{saope0c} has been supposed, $b^*$ is a closed, everywhere defined on ${\mathcal H}$ linear operator.  Hence, by the closed graph theorem, $a=x\sqrt{\varrho}=b^*$ is bounded,  and then $a^*=\sqrt{\varrho}\,x=b$ uniquely extends from ${\mathcal D}(x)$ to a bounded linear operator on all of ${\mathcal H}$. Formula \eqref{twoope1} implies that the following relation is satisfied:
$$
\bigl(a\sqrt{\varrho}+\sqrt{\varrho}\, a^*\bigr)\bigl|{\mathcal D}(x)=\bigl(x\varrho+\varrho x\bigr)\bigl|{\mathcal D}(x)=\omega\bigl|{\mathcal D}(x)
$$
Due to the above-stated boundedness of $a$ and $a^*$, by continuity the conclusion is that the operator equation \eqref{ope1b} is satisfied by the bounded operator $a=x\sqrt{\varrho}$.
\end{proof}
We may summarize all that into the following result.
\begin{theorem}\label{hsope}
If \eqref{ope1a} holds, then \eqref{ope1} admits a solution if, and only if,
\begin{equation}\label{hscond1}
x\sqrt{\varrho}\in {\mathsf{H.S.}}({\mathcal H})
\end{equation}
is fulfilled, with the densely defined, uniquely determined  self-adjoint linear operator $x$ obeying \eqref{twoope}, and in which case $a=x\sqrt{\varrho}$ is the unique solution of \eqref{ope1}.
\end{theorem}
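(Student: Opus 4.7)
The plan is to combine the four prior results in this subsection into a single equivalence, with one extra observation: Hilbert-Schmidt membership of $x\sqrt{\varrho}$ automatically places it in the closed subspace $\bigl[{\mathsf B}({\mathcal H})_{\mathsf h}\sqrt{\varrho}\,\bigr]_{{\mathsf{H.S.}}}$. Throughout, $x$ denotes the unique densely defined self-adjoint solution of \eqref{twoope} supplied by Lemma \ref{second}, and uniqueness of the solution of \eqref{ope1} (when it exists) is handed to us by Lemma \ref{unope1}.

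First I would dispose of the forward direction (necessity). If \eqref{ope1} admits a solution $a$, Corollary \ref{saope} yields $\sqrt{\varrho}\,{\mathcal H}\subset {\mathcal D}(x)$ together with $a=x\sqrt{\varrho}$. Since $a\in{\mathsf{H.S.}}({\mathcal H})$ by \eqref{ope1aa}, this already gives $x\sqrt{\varrho}\in{\mathsf{H.S.}}({\mathcal H})$, and $a=x\sqrt{\varrho}$ is indeed the unique solution.

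For the converse, suppose $x\sqrt{\varrho}\in{\mathsf{H.S.}}({\mathcal H})$. In particular $x\sqrt{\varrho}$ is everywhere defined, i.e.\ $\sqrt{\varrho}\,{\mathcal H}\subset{\mathcal D}(x)$, so Lemma \ref{invsaope} provides that $a:=x\sqrt{\varrho}$ is a bounded operator satisfying the algebraic identity \eqref{ope1b}. The only remaining point is the membership condition \eqref{ope1aa}, and this is the main step that has to be done by hand. I would use the block structure of $x$ exhibited in the proof of Lemma \ref{second}: on each finite-dimensional block $\Delta P_k{\mathcal H}$, $x$ restricts to the bounded self-adjoint operator $x_k\in\Delta P_k{\mathsf B}({\mathcal H})\Delta P_k$, so that $x$ commutes with every $\Delta P_k$, and hence with the finite-rank orthoprojection $P_n=\sum_{k\le n}\Delta P_k$, on ${\mathcal D}(x)$. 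Setting $b_n=\sum_{k\le n}x_k\in{\mathsf B}({\mathcal H})_{\mathsf h}$, for any $\varphi\in{\mathcal H}$ we have $\sqrt{\varrho}\,P_n\varphi\in P_n{\mathcal H}\subset{\mathcal D}(x_0)$, and using the commutation of $\sqrt{\varrho}$ with $P_n$ from \eqref{ope1a} a direct computation gives
\[
b_n\sqrt{\varrho}\,\varphi \;=\; x\sqrt{\varrho}\,P_n\varphi \;=\; P_n\,x\sqrt{\varrho}\,\varphi \;=\; P_n a\,\varphi.
\]
Hence $b_n\sqrt{\varrho}=P_n a$. Since $a\in{\mathsf{H.S.}}({\mathcal H})$ and $P_n\nearrow{\mathsf 1}$, the standard fact $\|a-P_n a\|_{\mathsf{H.S.}}\to 0$ exhibits $a$ as the Hilbert-Schmidt limit of the sequence $\{b_n\sqrt{\varrho}\}\subset{\mathsf B}({\mathcal H})_{\mathsf h}\sqrt{\varrho}$, which is exactly $a\in\bigl[{\mathsf B}({\mathcal H})_{\mathsf h}\sqrt{\varrho}\,\bigr]_{{\mathsf{H.S.}}}$. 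This completes sufficiency, and uniqueness follows from Lemma \ref{unope1}.

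The main obstacle is precisely the last identification: being Hilbert-Schmidt by itself is weaker than lying in the closure $[{\mathsf B}({\mathcal H})_{\mathsf h}\sqrt{\varrho}\,]_{{\mathsf{H.S.}}}$, and extracting an approximating sequence of the required form ${\mathsf B}({\mathcal H})_{\mathsf h}\sqrt{\varrho}$ requires the explicit block-diagonal structure of $x$ coming from \eqref{ope1a}. Once the commutation of $x$ with each $\Delta P_k$ is in hand, the truncations $P_n a$ themselves provide the approximants and the rest is bookkeeping.
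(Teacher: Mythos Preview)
Your proof is correct and follows essentially the same route as the paper: Corollary \ref{saope} for necessity, Lemma \ref{invsaope} for the algebraic identity in the converse, and then the block truncations $P_n a=\bigl(\sum_{k\le n}x_k\bigr)\sqrt{\varrho}$ to verify the membership condition \eqref{ope1aa}. Your derivation of $P_n a=b_n\sqrt{\varrho}$ from the commutation with $\Delta P_k$ is exactly what the paper invokes (via the construction \eqref{saope2a}), only spelled out a bit more explicitly.
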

\begin{proof}
By Lemma \ref{unope1} and Corollary \ref{saope}, if $a$ is a solution of \eqref{ope1}, then this solution is uniquely determined, is of Hilbert-Schmidt class and reads  $a=x\sqrt{\varrho}$. Thus especially
condition \eqref{hscond1} is fulfilled. On the other hand, suppose \eqref{hscond1} to be fulfilled. Then,
$a=x\sqrt{\varrho}$ is of Hilbert-Schmidt class. Thus, since $a a^*$ is of trace-class, we infer that
\[
\lim_{n\to\infty} {{\mathsf{tr}}}\,a a^* P_n^\perp=0
\]
is fulfilled, with $P_n=\sum_{k\leq n} \Delta P_k$.
Therefore, from ${{\mathsf{tr}}}\,a a^* P_n^\perp=\| a- P_n a\|_{{\mathsf{H.S.}}}^2$ and owing to $P_n a\in {\mathsf{H.S.}}({\mathcal H})$, for each $n\in {{\mathbb{N}}}$, we get
\[
a={\mathsf{H.S.}}-\lim_{n\to\infty} P_n a
\]
Now, by \eqref{saope3} and by construction of $x$ in accordance with \eqref{saope2a} we have
\[
P_n  a = \Bigl(\sum_{k\leq n} x_k\Bigr)\sqrt{\varrho} \in {\mathsf B}({\mathcal H})_{\mathsf{h}}\sqrt{\varrho}
\]
Hence, we can be assured that $a$ is satisfying the condition
$$a \in \bigl[{\mathsf B}({\mathcal H})_{\mathsf h}\sqrt{\varrho}\,\bigr]_{{\mathsf{H.S.}}}$$
Thus \eqref{ope1aa} holds.
Also, due to \eqref{hscond1},  boundedness of $a$ implies  $\sqrt{\varrho}\, {\mathcal H}\subset {\mathcal D}(x)$. Thus, Lemma \ref{invsaope} applies and is showing  that  $a$ solves the equation \eqref{ope1b}.
Hence, $a$ is a solution of \eqref{ope1} which by Lemma \ref{unope1} is the only one of that specification.
\end{proof}
Note that if \eqref{ope1a} is fulfilled with $\varrho$ and $\omega$, then according to Lemma \ref{second} the operator equation \eqref{twoope} always is solvable, with its solution $x$ being uniquely determined. The following example is showing that - provided ${\mathsf{dim}}\,{\mathcal H}=\infty$ holds - within the set of all  $\varrho$ and $\omega$ which are distinguished by  \eqref{ope1a}  combinations of $\varrho$ and $\omega$ can be chosen in such a way that condition \eqref{hscond1} actually may be violated. In those cases then one can be assured that a solution of \eqref{ope1} cannot exist.

We are going to construct an example where  $a=x\sqrt{\varrho}$ indeed is bounded  but fails to be compact.
To fix a context, let ${\mathcal K}$ be an infinite dimensional separable  Hilbert space and consider ${\mathsf B}({\mathcal H})={\mathsf M}_2({\mathsf B}({\mathcal K}))$
over the separable Hilbert space ${\mathcal H}={\mathcal K}\oplus {\mathcal K}$. This is to be understood in the usual sense of identifying linear operators over a direct sum of two examples of one Hilbert space with the $2\times 2$ operator-valued matrices with values in ${\mathsf B}({\mathcal K})$. Fix a density operator $\mu$ of full support over ${\mathcal K}$, and in terms of the above suppose
\[
\varrho=\frac{1}{2\,{{\mathsf{tr}}}\,\mu^2}\,
\begin{pmatrix}
\mu^2 & {\mathsf 0}  \\
{\mathsf 0} & \mu^2
\end{pmatrix},\ \omega=\frac{1}{{{\mathsf{tr}}}\,\mu^2}\,
\begin{pmatrix}
\mu & {\mathsf 0}  \\
{\mathsf 0} & -\mu
\end{pmatrix}
\]
Then, $\varrho$ is of full support over ${\mathcal H}$, and $\varrho$  and $\omega$ are mutually commuting, self-adjoint compact linear operators admitting a common diagonalization. Especially, as $\Delta P_k$ the one-dimensional orthoprojections projecting onto the eigenspaces generated by the eigenvectors of a common eigenbasis $\{\varphi_k\}$  of $\varrho$ will be chosen, that is $\Delta P_k \varphi_k=\varphi_k$ is supposed to hold. Thereby, $\{\varphi_{2m-1}\}$ and $\{\varphi_{2m}\}$ respectively, are  supposed to be clones of a fixed eigenbasis of $\mu$ and belonging to the first or the second of the ${\mathcal K}$-component exposed in the decomposition of ${\mathcal H}$, respectively. Also, for each $m\in {{\mathbb{N}}}$, both $\varphi_{2m-1}$ and $\varphi_{2m}$ are assumed to belong to the same eigenvalue $\mu_m^2/(2\,{{\mathsf{tr}}}\,\mu^2)$, with $\{\mu_m\}$ being a complete list of the  eigenvalues of $\mu$ (with each value repeated according to its multiplicity in the spectrum of $\mu$). In respect of $\{\Delta P_k\}$ both $\varrho$ and $\omega$ are satisfying the condition \eqref{ope1a} in a trivial way. Relative to these settings the following then is easily seen (the details are left for the reader).
\begin{example}\label{exsaope}
The unique solution $x$ of \eqref{twoope} reads
\[
x=
\begin{pmatrix}
\mu^{-1} & {\mathsf 0}  \\
{\mathsf 0} & \mu^{-1}
\end{pmatrix}, \text{ with }{\mathcal D}(x)=\Bigl\{\psi\in {\mathcal H}: \sum_{m=1}^\infty \frac{1}{\mu_m^2}\,\bigl(|\langle\psi,\varphi_{2m-1}\rangle|^2+|\langle\psi,\varphi_{2m}\rangle|^2\bigr)\Bigr\}
\]
and is satisfying $\sqrt{\varrho}\,{\mathcal H}\subset {\mathcal D}(x)$. Hence, the solution $a=x\sqrt{\varrho}$ of \eqref{ope1b} then reads
\[
a=\frac{1}{\sqrt{2\,{{\mathsf{tr}}}\,\mu^2}}\,
\begin{pmatrix}
{\mathsf 1} & \phantom{-}{\mathsf 0}  \\
{\mathsf 0} & -{\mathsf 1}
\end{pmatrix}
\]
Obviously, this is not a compact operator. Thus,  the operator equation \eqref{ope1b} for the given combination of $\varrho$ and $\omega$ cannot possess a solution satisfying \eqref{ope1aa}.
\end{example}
\newpage
\bibliographystyle{abbrv}

\begin{thebibliography}{10}

\bibitem{Albe:83}
P.~M. Alberti.
\newblock A note on the transition probability over ${C}^*$-algebras.
\newblock {\em Lett. Math. Phys.}, 7:25--32, 1983.

\bibitem{Albe:92.1}
P.~M. Alberti.
\newblock A {S}tudy on the {G}eometry of {P}airs of {P}ositive {L}inear
  {F}orms, {A}lgebraic {T}ransition {P}robability and {G}eometrical {P}hase
  over {N}on-{C}ommutative {O}perator {A}lgebras ({I}).
\newblock {\em Z. Anal. Anw.}, 11(3):293--334, 1992.

\bibitem{Albe:92.2}
P.~M. Alberti.
\newblock A {S}tudy on the {G}eometry of {P}airs of {P}ositive {L}inear
  {F}orms, {A}lgebraic {T}ransition {P}robability and {G}eometrical {P}hase
  over {N}on-{C}ommutative {O}perator {A}lgebras ({II}).
\newblock {\em Z. Anal. Anw.}, 11(4):455--488, 1992.

\bibitem{Albe:97.1}
P.~M. Alberti.
\newblock Bures metric on the normal state space of ${vN}$-algebras and
  differentiability.
\newblock NTZ-Preprint 17/97, 1997.
\newblock {\tt{http://www.uni-leipzig.de/$\sim$ntz/abs/abs1797.htm}}.

\bibitem{Albe:96.1}
P.~M. Alberti.
\newblock {Results on the Geometric Structure of the Normal State Space I.
  Tangent Forms and Bures' Inner Geometry}.
\newblock (unpublished), 1997.

\bibitem{Albe:02.1}
P.~M. Alberti.
\newblock {Playing with Fidelities}.
\newblock {\em Rep. Math. Phys.}, 53(1):87--125, 2003.

\bibitem{AlPe:00.2}
P.~M. Alberti and G.~Peltri.
\newblock {On Bures distance over standard form {$vN$}--algebras}.
\newblock e-print arXiv {\tt{math.OA/0008164}}, 2000.

\bibitem{AlUh:84}
P.~M. Alberti and A.~Uhlmann.
\newblock {Transition probabilities on $C^*$- and $W^*$-algebras}.
\newblock In H.~Baumg{\"a}rtel, G.~La{\ss}ner, A.~Pietsch, and A.~Uhlmann,
  editors, {\em {Proceedings of the Second International Conference on Operator
  Algebras, Ideals, and Their Applications in Theoretical Physics, Leipzig
  1983}}, pages 5--11, Leipzig, 1984. BSB B.G.Teubner Verlagsgesellschaft.
\newblock Teubner-Texte zur Mathematik, Bd.67.

\bibitem{AlUh:00.1}
P.~M. Alberti and A.~Uhlmann.
\newblock {On Bures-Distance and $^*$-Algebraic Transition Probability between
  Inner Derived Positive Linear Forms over ${\mathsf W}^*$-Algebras}.
\newblock {\em Acta Applicandae Mathematicae}, 60(1):1--37, 2000.

\bibitem{Alex:55}
A.~D. Alexandrow.
\newblock {\em Die innere Geometrie der konvexen Fl{\"a}chen}.
\newblock Akademie-Verlag, Berlin, 1955.

\bibitem{AnMa:93}
P.~Antonelli, R.~Ingarden, and M.~Matsumoto.
\newblock {\em The Theory of Sprays and Finsler Spaces with Applications in
  Physics and Biology}, volume~58 of {\em Fundamental Theories of Physics}.
\newblock Kluwer Academic Publication, Amsterdam, 1993.

\bibitem{Arak:71}
H.~Araki.
\newblock A remark on {Bures Distance Function} for normal states.
\newblock {\em Publ. RIMS (Kyoto)}, 6:477--482, 1970/71.

\bibitem{Arak:72}
H.~Araki.
\newblock Bures distance function and a generalization of {S}akai's
  non-commutative {R}adon-{N}ikodym theorem.
\newblock {\em Publ. RIMS (Kyoto)}, 8:335--362, 1972.

\bibitem{Arak:74}
H.~Araki.
\newblock Some properties of modular conjugation operator of von-{N}eumann
  algebras and a non-commutative {Radon-Nikodym} theorem with a chain rule.
\newblock {\em Pacific J. Math.}, 50(2):309--354, 1974.

\bibitem{BenZyc:06}
I.~Bengtsson and K.~Zyckowski.
\newblock {\em Geometry of Quantum States \textup{(}An Introduction to Quantum
  Entanglement\textup{)}}.
\newblock Cambridge University Press, Cambridge, NY, Melbourne, Madrid, Cape
  Town, Singapore, S\~ao Paulo, 2006.

\bibitem{Bure:69}
D.~Bures.
\newblock An extension of {K}akutani's theorem on infinite product measures to
  the tensor product of semifinite ${W}^*$ -algebras.
\newblock {\em Trans. Amer. Math. Soc.}, 135:199--212, 1969.

\bibitem{Buse:55}
H.~Busemann.
\newblock {\em The geometry of geodesics}.
\newblock Pure and Applied Mathematics. Academic Press Inc., New York, 1955.

\bibitem{Chav:93}
I.~Chavel.
\newblock {\em Riemannian geometry---A modern introduction}, volume 108 of {\em
  Cambridge Tracts in Mathematics}.
\newblock Cambridge University Press, Cambridge-New York-Melbourne, 1993.

\bibitem{Chen:72}
N.~Chentsov.
\newblock {\em Statistical Decision Rules and Optimal Inference}.
\newblock Nauka, Moscow, 1972.
\newblock (in Russian).

\bibitem{choi:72}
M.-D. Choi.
\newblock {Positive linear maps on $C^*$-algebras}.
\newblock {\em Canad. J. Math.}, XXIV:520--529, 1972.

\bibitem{Conn:74}
A.~Connes.
\newblock Caract{\'{e}}rization des espaces vectoriels ordonn{\'{e}}s
  sous-jacent aux alg{\`{e}}bres de von {N}eumann.
\newblock {\em Ann. Inst. Fourier (Grenoble)}, 24(4):121--155, 1974.

\bibitem{Ditt:94}
J.~Dittmann.
\newblock {Some properties of the Riemannian Bures metric on mixed states}.
\newblock {\em J. Geom. Phys.}, 13:203--206, 1994.

\bibitem{DiRu:92}
J.~Dittmann and G.~Rudolph.
\newblock A class of connections governing parallel transport along density
  matrices.
\newblock {\em J. Math. Phys.}, 33(12):4148--4154, December 1992.

\bibitem{DiRu:92.1}
J.~Dittmann and G.~Rudolph.
\newblock On a connection governing parallel transport along $ 2\times 2 $
  -density matrices.
\newblock {\em J. Geom. Phys.}, 10:93--106, 1992.

\bibitem{DitUhl:99}
J.~Dittmann and A.~Uhlmann.
\newblock {Connections and metrics respecting purification of quantum states}.
\newblock {\em J. Math. Phys.}, 40(7):3246--3267, 1999.

\bibitem{Dixm:64}
P.~Dixmier.
\newblock {\em {Les $C^*$-Alg\`ebres et leurs Repr\'esentations}}.
\newblock Gauthier-Villars, Paris, 1964.

\bibitem{Dono:74}
W.~Donogue.
\newblock {\em Monotone matrix functions and analytic continuation}, volume 207
  of {\em Die Grundlehren der mathematischen Wissenschaften}.
\newblock Springer-Verlag, Berlin-Heidelberg-NY, 1974.

\bibitem{DyRu:66}
H.~Dye and B.~Russo.
\newblock A note on unitary operators in ${C}^*$-algebras.
\newblock {\em Duke Math. J.}, 33:413--416, 1966.

\bibitem{Fubi:04}
G.~Fubini.
\newblock {Sulle metriche definite da una forme Hermitiana}.
\newblock {\em Atti Istit. Veneto}, 63:502--513, 1904.

\bibitem{Rino:61a}
L.~W. Green.
\newblock Review: {Willi Rinow, Die innere Geometrie der metrischen R{\"a}ume}.
\newblock {\em Bull. Amer. Math. Soc.}, 69(2):210--212, 1963.

\bibitem{Grom:81}
M.~Gromov.
\newblock {\em Structures m{\'e}triques pour les Vari{\'e}t{\'e}s
  Riemannienes}.
\newblock Cedic/Ferdnand Nathan, Paris, 1981.

\bibitem{Gudd:79}
S.~Gudder.
\newblock A {R}adon-{N}ikodym {T}heorem for $^*$-algebras.
\newblock {\em Pacific J. Math.}, 80(1):141--149, 1979.

\bibitem{Haage:75}
U.~Haagerup.
\newblock The standard form of von-{N}eumann algebras.
\newblock {\em Math. Scand.}, 37:244--262, 1975.

\bibitem{Hein:51}
E.~Heinz.
\newblock {Beitr{\"a}ge zur St{\"o}rungstheorie der Spektralzerlegung}.
\newblock {\em Math. Annalen}, 123:415--438, 1951.

\bibitem{Hiai:84}
F.~Hiai.
\newblock {Absolute Continuity and Disjointness of States in $C^*$-dynamical
  Systems}.
\newblock {\em J. Operator Theory}, 11:319--331, 1984.

\bibitem{Hueb:92}
M.~H{\"u}bner.
\newblock Explicite computation of the {B}ures distance for density matrices.
\newblock {\em Phys. Lett.}, A 163:239, 1992.

\bibitem{Hueb:92.1}
M.~H{\"u}bner.
\newblock {\em M{etrische und Holonomieeigenschaften von geometrischen Phasen
  und Dichtematrizen}}.
\newblock PhD thesis, FB Physik d. Univ. Leipzig, Math.-Naturwiss.Fakult{\"a}t
  d.Univ. Leipzig, 1992.

\bibitem{Hueb:93}
M.~H{\"u}bner.
\newblock Computation of {Uhlmann's parallel transport for density matrices and
  the Bures metric on three-dimensional Hilbert space}.
\newblock {\em Phys. Lett.}, A 179:226--230, 1993.

\bibitem{Jozs:94}
R.~Jozsa.
\newblock Fidelity for mixed quantum states.
\newblock {\em J. Mod. Optics}, 41(12):2315--2323, 1994.

\bibitem{Kadi:66}
R.~Kadison.
\newblock Derivations of operator algebras.
\newblock {\em Ann. Math.}, 83:280--293, 1966.

\bibitem{KaRi:83}
R.~Kadison and J.~Ringrose.
\newblock {\em Fundamentals of the Theory of Operator Algebras}, volume I
  Elementary Theory.
\newblock Academic Press, Inc., NY--London--Paris, 1983.

\bibitem{Kato:66}
T.~Kato.
\newblock {\em Perturbation theory for linear operators}, volume 132 of {\em
  Die Grundlehren der mathematischen Wissenschaften}.
\newblock Springer-Verlag, Berlin-Heidelberg-NY, 1966.

\bibitem{Khol:76}
A.~Kholevo.
\newblock Non-commutative analogues of the {Cram{\'{e}}r-R}ao inequality in the
  quantum measurement theory.
\newblock In G.~Maruyama and J.~Prokhorov, editors, {\em Proceedings of the
  Third Japan-USSR Symposium on Probability Theory}, pages 194--222,
  Berlin-Heidelberg-NY, 1976. Springer-Verlag.
\newblock LNM 550.

\bibitem{Khol:82}
A.~Kholevo.
\newblock {\em Probabilistic and statistical aspects of quantum theory},
  volume~1 of {\em Statistics and Probability}.
\newblock North-Holland Publishing Comp., Amsterdam, 1982.

\bibitem{LuRo:59}
G.~Lumer and M.~Rosenblum.
\newblock Linear operator equations.
\newblock {\em Proc. Amer. Math. Soc.}, 10:32--41, 1959.

\bibitem{Meng:30}
K.~Menger.
\newblock Untersuchungen {\"u}ber allgemeine {Metrik, IV. Zur Metrik der
  Kurven}.
\newblock {\em Math. Ann.}, 103:466--501, 1930.

\bibitem{Nies:83}
G.~Niestegge.
\newblock {Absolute Continuity for Linear Forms on $B^*$-algebras and a
  Radon-Nikodym type Theorem (Quadratic Version)}.
\newblock {\em Rendiconti del Circolo Matematico di Palermo}, XXXII(Serie
  II):133--145, 1983.

\bibitem{Pelt:00.1}
G.~Peltri.
\newblock {\em Beitr{\"a}ge und {Beispiele zur Bures-Geometrie}}.
\newblock PhD thesis, Fakult{\"a}t f{\"u}r Mathematik und Informatik der
  Universit{\"a}t, Leipzig, 2002.

\bibitem{Pelt:00.1a}
G.~Peltri.
\newblock Beitr{\"a}ge und {Beispiele zur Bures-Geometrie}.
\newblock \\{\tt{http://lips.informatik.uni-leipzig.de/pub/2002-4}}, 2002.
\newblock Fakult{\"a}t f{\"u}r Mathematik und Informatik der Universit{\"a}t.

\bibitem{Ragg:84}
G.~Raggio.
\newblock Generalized transition probabilities and applications.
\newblock In L.~Accardi, editor, {\em Quantum Probability and Applications to
  the Quantum Theory of Irreversible Processes---Proceedings of the
  International Workshop at Villa Mondragone, 1982}, pages 327--335,
  Berlin-Heidelberg-NY, 1984. Springer-Verlag.
\newblock LNM 1055.

\bibitem{Rino:61}
W.~Rinow.
\newblock {\em {Die innere Geometrie der metrischen R{\"a}ume}}.
\newblock Springer Verlag, Berlin, 1961.

\bibitem{Rose:56}
M.~Rosenblum.
\newblock On the operator equation ${BX-XA=Q}$.
\newblock {\em Duke Math. J.}, 23:263--269, 1956.

\bibitem{Saka:56}
S.~Sakai.
\newblock A characterization of ${W}^*$-algebras.
\newblock {\em Pacific J. Math.}, 6:763--773, 1956.

\bibitem{Saka:65}
S.~Sakai.
\newblock A {R}adon-{N}ikodym theorem in ${W}^*$-algebras.
\newblock {\em Bull. Amer. Math. Soc}, 71:149--151, 1965.

\bibitem{Saka:66}
S.~Sakai.
\newblock Derivations of {$W^*$}-algebras.
\newblock {\em Ann. Math.}, 83:273--279, 1966.

\bibitem{Saka:71}
S.~Sakai.
\newblock {\em ${C}^*$-Algebras and ${W}^*$-Algebras}.
\newblock Springer-Verlag, Berlin-Heidelberg-New York, 1971.

\bibitem{Scha:70}
R.~Schatten.
\newblock {\em Norm ideals of completely continuous operators}, volume~27 of
  {\em Ergebnisse der Mathematik und ihrer Grenzgebiete}.
\newblock Springer-Verlag, Berlin-Heidelberg-NY, second edition, 1970.

\bibitem{stine:55}
W.~Stinespring.
\newblock {Positive functions on $C^*$-algebras}.
\newblock {\em Proc. Amer. Math. Soc.}, 6:211--216, 1955.

\bibitem{storm:74}
E.~St{\o}rmer.
\newblock {Positive linear maps of $C^*$-algebras}.
\newblock {\em Lecture Notes in Physics}, 29:85--102, 1974.

\bibitem{StZs:79}
{\d{S}}.~Str\v{a}til\v{a} and L.~Zsid\'{o}.
\newblock {\em Lectures on {von Neumann A}lgebras}.
\newblock Editura Academiei and Abacus Press, Bucuresti and Tunebridge Wells,
  Kent, 1979.

\bibitem{Stud:05}
E.~Study.
\newblock {K\"urzeste Wege im komplexen Gebiet}.
\newblock {\em Math. Ann.}, 60:321--378, 1905.

\bibitem{Take:70}
M.~Takesaki.
\newblock {\em Tomita's theory of modular {H}ilbert-algebras and its
  application}, volume 128 of {\em Lecture Notes in Mathematics}.
\newblock Springer-Verlag, Berlin-Heidelberg-New York, 1970.

\bibitem{Take:79}
M.~Takesaki.
\newblock {\em Theory of Operator Algebras}, volume~1.
\newblock Springer-Verlag, Berlin-Heidelberg-New York, 1979.

\bibitem{Uhlm:76}
A.~Uhlmann.
\newblock The transition probability in the state space of a $^*$-algebra.
\newblock {\em Rep. Math. Phys.}, 9:273--279, 1976.

\bibitem{Uhlm:85}
A.~Uhlmann.
\newblock The transition probability for states of $^*$-algebras.
\newblock {\em Annalen der Physik}, 42:524--531, 1985.

\bibitem{Uhlm:91.2}
A.~Uhlmann.
\newblock Algebra and {G}eometry in the {T}heory of {M}ixed {S}tates.
\newblock In B.~Gruber, editor, {\em Symmetry in Physics, {\rm V}}, page 565.
  Plenum Publ.Corporation, New York, 1991.

\bibitem{Uhlm:92.1}
A.~Uhlmann.
\newblock The {M}etric of {B}ures and the {G}eometric {P}hase.
\newblock In R.Gielerak and et~al., editors, {\em Quantum Groups and Related
  Topics}, page 264. Kluwer Academic Publishers, Amsterdam, 1992.

\bibitem{Uhlm:93}
A.~Uhlmann.
\newblock Density operators as an arena for differential geometry.
\newblock {\em Rep. Math. Phys.}, 33(1/2):253--263, 1993.

\bibitem{Uhlm:95}
A.~Uhlmann.
\newblock Geometric phases and related structures.
\newblock {\em Rep. Math. Phys.}, 36(2/3):461--481, 1995.

\bibitem{Uhlm:11}
A.~Uhlmann.
\newblock Transition {Probability (Fidelity) and its Relatives}.
\newblock {\em Foundation of Physics}, 41:288--298, 2011.

\bibitem{Wald:33}
A.~Wald.
\newblock {Axiomatik des metrischen Zwischenbegriffes}.
\newblock {\em Erg. Math. Koll. Wien}, 4:23--24, 1933.

\bibitem{Whit:32}
J.~Whitehead.
\newblock {Convex regions in the geometry of paths}.
\newblock {\em Quart. Jour. of Math.}, 3:33--42, 1932.

\bibitem{Whit:35}
J.~Whitehead.
\newblock {On the covering of a complete space by the geodesics through a
  point}.
\newblock {\em Ann. Math.}, 36:679--704, 1935.
\end{thebibliography}

\end{document}